\newtheorem{theorem}{Theorem}[chapter]
\newtheorem{corollary}[theorem]{Corollary}
\newtheorem{proposition}[theorem]{Proposition}
\newtheorem{lemma}[theorem]{Lemma}
\theoremstyle{definition}
\newtheorem{definition}[theorem]{Definition}
\newtheorem{example}[theorem]{Example}
\newtheorem{notation}[theorem]{Notation}
\newtheorem{hypothesis}[theorem]{Hypothesis}
\theoremstyle{remark}
\newtheorem{remark}[theorem]{Remark}
\numberwithin{section}{chapter}
\numberwithin{equation}{chapter}
\chardef\@x10\chardef\@xv60
\def\tcitime{
\def\@time{%
  \@minute\time\@hour\@minute\divide\@hour\@xv
  \ifnum\@hour<\@x 0\fi\the\@hour:%
  \multiply\@hour\@xv\advance\@minute-\@hour
  \ifnum\@minute<\@x 0\fi\the\@minute
  }}%
\def\QCTOpt[#1]#2{%
  \def\QCTOptB{#1}
  \def\QCTOptA{#2}
}
\def\QCTNOpt#1{%
  \def\QCTOptA{#1}
  \let\QCTOptB\empty
}
\def\Qct{%
  \@ifnextchar[{%
    \QCTOpt}{\QCTNOpt}
}
\def\QCBOpt[#1]#2{%
  \def\QCBOptB{#1}
  \def\QCBOptA{#2}
}
\def\QCBNOpt#1{%
  \def\QCBOptA{#1}
  \let\QCBOptB\empty
}
\def\Qcb{%
  \@ifnextchar[{%
    \QCBOpt}{\QCBNOpt}
}
\def\PrepCapArgs{%
  \ifx\QCBOptA\empty
    \ifx\QCTOptA\empty
      {}%
    \else
      \ifx\QCTOptB\empty
        {\QCTOptA}%
      \else
        [\QCTOptB]{\QCTOptA}%
      \fi
    \fi
  \else
    \ifx\QCBOptA\empty
      {}%
    \else
      \ifx\QCBOptB\empty
        {\QCBOptA}%
      \else
        [\QCBOptB]{\QCBOptA}%
      \fi
    \fi
  \fi
}
\def\GRAPHICSPS#1{%
 \ifcase\GRAPHICSTYPE
   \special{ps: #1}%
 \or
   \special{language "PS", include "#1"}%
 \fi
}%
\def\graffile#1#2#3#4{%
    \leavevmode
    \raise -#4 \BOXTHEFRAME{%
        \hbox to #2{\raise #3\hbox to #2{\null #1\hfil}}}%
}%
\def\draftbox#1#2#3#4{%
 \leavevmode\raise -#4 \hbox{%
  \frame{\rlap{\protect\tiny #1}\hbox to #2%
   {\vrule height#3 width\z@ depth\z@\hfil}%
  }%
 }%
}%
\newif\ifwasdraft
\def\GRAPHIC#1#2#3#4#5{%
 \ifnum\draft=\@ne\draftbox{#2}{#3}{#4}{#5}%
  \else\graffile{#1}{#3}{#4}{#5}%
  \fi
 }%
\def\addtoLaTeXparams#1{%
    \edef\LaTeXparams{\LaTeXparams #1}}%
\newif\ifBoxFrame \BoxFramefalse
\newif\ifOverFrame \OverFramefalse
\newif\ifUnderFrame \UnderFramefalse
\def\BOXTHEFRAME#1{%
   \hbox{%
      \ifBoxFrame
         \frame{#1}%
      \else
         {#1}%
      \fi
   }%
}
\def\doFRAMEparams#1{\BoxFramefalse\OverFramefalse\UnderFramefalse\readFRAMEparams#1\end}%
\def\readFRAMEparams#1{%
 \ifx#1\end%
  \let\next=\relax
  \else
  \ifx#1i\dispkind=\z@\fi
  \ifx#1d\dispkind=\@ne\fi
  \ifx#1f\dispkind=\tw@\fi
  \ifx#1t\addtoLaTeXparams{t}\fi
  \ifx#1b\addtoLaTeXparams{b}\fi
  \ifx#1p\addtoLaTeXparams{p}\fi
  \ifx#1h\addtoLaTeXparams{h}\fi
  \ifx#1X\BoxFrametrue\fi
  \ifx#1O\OverFrametrue\fi
  \ifx#1U\UnderFrametrue\fi
  \ifx#1w
    \ifnum\draft=1\wasdrafttrue\else\wasdraftfalse\fi
    \draft=\@ne
  \fi
  \let\next=\readFRAMEparams
  \fi
 \next
 }%
\def\IFRAME#1#2#3#4#5#6{%
      \bgroup
      \let\QCTOptA\empty
      \let\QCTOptB\empty
      \let\QCBOptA\empty
      \let\QCBOptB\empty
      #6%
      \parindent=0pt%
      \leftskip=0pt
      \rightskip=0pt
      \setbox0 = \hbox{\QCBOptA}%
      \@tempdima = #1\relax
      \ifOverFrame
          \typeout{This is not implemented yet}%
          \show\HELP
      \else
         \ifdim\wd0>\@tempdima
            \advance\@tempdima by \@tempdima
            \ifdim\wd0 >\@tempdima
               \textwidth=\@tempdima
               \setbox1 =\vbox{%
                  \noindent\hbox to \@tempdima{\hfill\GRAPHIC{#5}{#4}{#1}{#2}{#3}\hfill}\\%
                  \noindent\hbox to \@tempdima{\parbox[b]{\@tempdima}{\QCBOptA}}%
               }%
               \wd1=\@tempdima
            \else
               \textwidth=\wd0
               \setbox1 =\vbox{%
                 \noindent\hbox to \wd0{\hfill\GRAPHIC{#5}{#4}{#1}{#2}{#3}\hfill}\\%
                 \noindent\hbox{\QCBOptA}%
               }%
               \wd1=\wd0
            \fi
         \else
            \ifdim\wd0>0pt
              \hsize=\@tempdima
              \setbox1 =\vbox{%
                \unskip\GRAPHIC{#5}{#4}{#1}{#2}{0pt}%
                \break
                \unskip\hbox to \@tempdima{\hfill \QCBOptA\hfill}%
              }%
              \wd1=\@tempdima
           \else
              \hsize=\@tempdima
              \setbox1 =\vbox{%
                \unskip\GRAPHIC{#5}{#4}{#1}{#2}{0pt}%
              }%
              \wd1=\@tempdima
           \fi
         \fi
         \@tempdimb=\ht1
         \advance\@tempdimb by \dp1
         \advance\@tempdimb by -#2%
         \advance\@tempdimb by #3%
         \leavevmode
         \raise -\@tempdimb \hbox{\box1}%
      \fi
      \egroup%
}%
\def\DFRAME#1#2#3#4#5{%
 \begin{center}
     \let\QCTOptA\empty
     \let\QCTOptB\empty
     \let\QCBOptA\empty
     \let\QCBOptB\empty
     \ifOverFrame 
        #5\QCTOptA\par
     \fi
     \GRAPHIC{#4}{#3}{#1}{#2}{\z@}
     \ifUnderFrame 
        \nobreak\par #5\QCBOptA
     \fi
 \end{center}%
 }%
\def\FFRAME#1#2#3#4#5#6#7{%
 \begin{figure}[#1]%
  \let\QCTOptA\empty
  \let\QCTOptB\empty
  \let\QCBOptA\empty
  \let\QCBOptB\empty
  \ifOverFrame
    #4
    \ifx\QCTOptA\empty
    \else
      \ifx\QCTOptB\empty
        \caption{\QCTOptA}%
      \else
        \caption[\QCTOptB]{\QCTOptA}%
      \fi
    \fi
    \ifUnderFrame\else
      \label{#5}%
    \fi
  \else
    \UnderFrametrue%
  \fi
  \begin{center}\GRAPHIC{#7}{#6}{#2}{#3}{\z@}\end{center}%
  \ifUnderFrame
    #4
    \ifx\QCBOptA\empty
      \caption{}%
    \else
      \ifx\QCBOptB\empty
        \caption{\QCBOptA}%
      \else
        \caption[\QCBOptB]{\QCBOptA}%
      \fi
    \fi
    \label{#5}%
  \fi
  \end{figure}%
 }%
\def\makeactives{
  \catcode`\"=\active
  \catcode`\;=\active
  \catcode`\:=\active
  \catcode`\'=\active
  \catcode`\~=\active
}
   \gdef\activesoff{%
      \def"{\string"}
      \def;{\string;}
      \def:{\string:}
      \def'{\string'}
      \def~{\string~}
    }
\def\FRAME#1#2#3#4#5#6#7#8{%
 \bgroup
 \@ifundefined{bbl@deactivate}{}{\activesoff}
 \ifnum\draft=\@ne
   \wasdrafttrue
 \else
   \wasdraftfalse%
 \fi
 \def\LaTeXparams{}%
 \dispkind=\z@
 \def\LaTeXparams{}%
 \doFRAMEparams{#1}%
 \ifnum\dispkind=\z@\IFRAME{#2}{#3}{#4}{#7}{#8}{#5}\else
  \ifnum\dispkind=\@ne\DFRAME{#2}{#3}{#7}{#8}{#5}\else
   \ifnum\dispkind=\tw@
    \edef\@tempa{\noexpand\FFRAME{\LaTeXparams}}%
    \@tempa{#2}{#3}{#5}{#6}{#7}{#8}%
    \fi
   \fi
  \fi
  \ifwasdraft\draft=1\else\draft=0\fi{}%
  \egroup
 }%
\def\TEXUX#1{"texux"}
\def\func#1{\mathop{\rm #1}}%
\long\def\QQQ#1#2{%
     \long\expandafter\def\csname#1\endcsname{#2}}%
\long\def\QQA#1#2{}%
\def\QTR#1#2{{\csname#1\endcsname #2}}
\def\EXPAND#1[#2]#3{}%
\def\NOEXPAND#1[#2]#3{}%
\def\LaTeXparent#1{}%
\def\ChildStyles#1{}%
\def\ChildDefaults#1{}%
\def\QTagDef#1#2#3{}%
\def\QQfnmark#1{\footnotemark}
\def\makeatletter\input gnuindex.sty\makeatother\makeindex{\makeatletter\input gnuindex.sty\makeatother\makeindex}%
\def\initial#1{\bigbreak{\raggedright\large\bf #1}\kern 2\p@\penalty3000}}%
 \def\abstract{%
  \if@twocolumn
   \section*{Abstract (Not appropriate in this style!)}%
   \else \small 
   \begin{center}{\bf Abstract\vspace{-.5em}\vspace{\z@}}\end{center}%
   \quotation 
   \fi
  }%
   \def\registered{\relax\ifmmode{}\r@gistered
                    \else$\m@th\r@gistered$\fi}%
 \def\r@gistered{^{\ooalign
  {\hfil\raise.07ex\hbox{$\scriptstyle\rm\text{R}$}\hfil\crcr
  \mathhexbox20D}}}}{}%
\newdimen\theight
\def\Column{%
 \vadjust{\setbox\z@=\hbox{\scriptsize\quad\quad tcol}%
  \theight=\ht\z@\advance\theight by \dp\z@\advance\theight by \lineskip
  \kern -\theight \vbox to \theight{%
   \rightline{\rlap{\box\z@}}%
   \vss
   }%
  }%
 }%
\def\qed{%
 \ifhmode\unskip\nobreak\fi\ifmmode\ifinner\else\hskip5\p@\fi\fi
 \hbox{\hskip5\p@\vrule width4\p@ height6\p@ depth1.5\p@\hskip\p@}%
 }%
\def\miss{\hbox{\vrule height2\p@ width 2\p@ depth\z@}}%
\def\tcol#1{{\baselineskip=6\p@ \vcenter{#1}} \Column}  %
\def\newfmtname{LaTeX2e}
\def\chkcompat{%
   \if@compatibility
   \else
     \usepackage{latexsym}
   \fi
}
  \DeclareOldFontCommand{\rm}{\normalfont\rmfamily}{\mathrm}
  \DeclareOldFontCommand{\sf}{\normalfont\sffamily}{\mathsf}
  \DeclareOldFontCommand{\tt}{\normalfont\ttfamily}{\mathtt}
  \DeclareOldFontCommand{\bf}{\normalfont\bfseries}{\mathbf}
  \DeclareOldFontCommand{\it}{\normalfont\itshape}{\mathit}
  \DeclareOldFontCommand{\sl}{\normalfont\slshape}{\@nomath\sl}
  \DeclareOldFontCommand{\sc}{\normalfont\scshape}{\@nomath\sc}
\def\alpha{{\Greekmath 010B}}%
\def\beta{{\Greekmath 010C}}%
\def\gamma{{\Greekmath 010D}}%
\def\delta{{\Greekmath 010E}}%
\def\epsilon{{\Greekmath 010F}}%
\def\zeta{{\Greekmath 0110}}%
\def\eta{{\Greekmath 0111}}%
\def\theta{{\Greekmath 0112}}%
\def\iota{{\Greekmath 0113}}%
\def\kappa{{\Greekmath 0114}}%
\def\lambda{{\Greekmath 0115}}%
\def\mu{{\Greekmath 0116}}%
\def\nu{{\Greekmath 0117}}%
\def\xi{{\Greekmath 0118}}%
\def\pi{{\Greekmath 0119}}%
\def\rho{{\Greekmath 011A}}%
\def\sigma{{\Greekmath 011B}}%
\def\tau{{\Greekmath 011C}}%
\def\upsilon{{\Greekmath 011D}}%
\def\phi{{\Greekmath 011E}}%
\def\chi{{\Greekmath 011F}}%
\def\psi{{\Greekmath 0120}}%
\def\omega{{\Greekmath 0121}}%
\def\varepsilon{{\Greekmath 0122}}%
\def\vartheta{{\Greekmath 0123}}%
\def\varpi{{\Greekmath 0124}}%
\def\varrho{{\Greekmath 0125}}%
\def\varsigma{{\Greekmath 0126}}%
\def\varphi{{\Greekmath 0127}}%
\def\nabla{{\Greekmath 0272}}
\def\FindBoldGroup{%
   {\setbox0=\hbox{$\mathbf{x\global\edef\theboldgroup{\the\mathgroup}}$}}%
}
\def\Greekmath#1#2#3#4{%
    \if@compatibility
        \ifnum\mathgroup=\symbold
           \mathchoice{\mbox{\boldmath$\displaystyle\mathchar"#1#2#3#4$}}%
                      {\mbox{\boldmath$\textstyle\mathchar"#1#2#3#4$}}%
                      {\mbox{\boldmath$\scriptstyle\mathchar"#1#2#3#4$}}%
                      {\mbox{\boldmath$\scriptscriptstyle\mathchar"#1#2#3#4$}}%
        \else
           \mathchar"#1#2#3#4%
        \fi 
    \else 
        \FindBoldGroup
        \ifnum\mathgroup=\theboldgroup 
           \mathchoice{\mbox{\boldmath$\displaystyle\mathchar"#1#2#3#4$}}%
                      {\mbox{\boldmath$\textstyle\mathchar"#1#2#3#4$}}%
                      {\mbox{\boldmath$\scriptstyle\mathchar"#1#2#3#4$}}%
                      {\mbox{\boldmath$\scriptscriptstyle\mathchar"#1#2#3#4$}}%
        \else
           \mathchar"#1#2#3#4%
        \fi     	    
	  \fi}
\newif\ifGreekBold  \GreekBoldfalse
\let\SAVEPBF=\pbf
\def\pbf{\GreekBoldtrue\SAVEPBF}%
  \newcounter{equationnumber}  
  \def\mathletters{%
     \addtocounter{equation}{1}
     \edef\@currentlabel{\theequation}%
     \setcounter{equationnumber}{\c@equation}
     \setcounter{equation}{0}%
     \edef\theequation{\@currentlabel\noexpand\alph{equation}}%
  }
    \def\BibTeX{{\rm B\kern-.05em{\sc i\kern-.025em b}\kern-.08em
                 T\kern-.1667em\lower.7ex\hbox{E}\kern-.125emX}}}{}%
\def\AmS{{\protect\usefont{OMS}{cmsy}{m}{n}%
                A\kern-.1667em\lower.5ex\hbox{M}\kern-.125emS}}}{}%
\let\DOTSI\relax
\def\RIfM@{\relax\ifmmode}%
\def\FN@{\futurelet\next}%
\def\iint{\DOTSI\intno@\tw@\FN@\ints@}%
\def\iiint{\DOTSI\intno@\thr@@\FN@\ints@}%
\def\iiiint{\DOTSI\intno@4 \FN@\ints@}%
\def\idotsint{\DOTSI\intno@\z@\FN@\ints@}%
\def\ints@{\findlimits@\ints@@}%
\newif\iflimtoken@
\newif\iflimits@
\def\findlimits@{\limtoken@true\ifx\next\limits\limits@true
 \else\ifx\next\nolimits\limits@false\else
 \limtoken@false\ifx\ilimits@\nolimits\limits@false\else
 \ifinner\limits@false\else\limits@true\fi\fi\fi\fi}%
\def\multint@{\int\ifnum\intno@=\z@\intdots@                          
 \else\intkern@\fi                                                    
 \ifnum\intno@>\tw@\int\intkern@\fi                                   
 \ifnum\intno@>\thr@@\int\intkern@\fi                                 
 \int}
\def\multintlimits@{\intop\ifnum\intno@=\z@\intdots@\else\intkern@\fi
 \ifnum\intno@>\tw@\intop\intkern@\fi
 \ifnum\intno@>\thr@@\intop\intkern@\fi\intop}%
\def\intic@{%
    \mathchoice{\hskip.5em}{\hskip.4em}{\hskip.4em}{\hskip.4em}}%
\def\negintic@{\mathchoice
 {\hskip-.5em}{\hskip-.4em}{\hskip-.4em}{\hskip-.4em}}%
\def\ints@@{\iflimtoken@                                              
 \def\ints@@@{\iflimits@\negintic@
   \mathop{\intic@\multintlimits@}\limits                             
  \else\multint@\nolimits\fi                                          
  \eat@}
 \else                                                                
 \def\ints@@@{\iflimits@\negintic@
  \mathop{\intic@\multintlimits@}\limits\else
  \multint@\nolimits\fi}\fi\ints@@@}%
\def\intkern@{\mathchoice{\!\!\!}{\!\!}{\!\!}{\!\!}}%
\def\plaincdots@{\mathinner{\cdotp\cdotp\cdotp}}%
\def\intdots@{\mathchoice{\plaincdots@}%
 {{\cdotp}\mkern1.5mu{\cdotp}\mkern1.5mu{\cdotp}}%
 {{\cdotp}\mkern1mu{\cdotp}\mkern1mu{\cdotp}}%
 {{\cdotp}\mkern1mu{\cdotp}\mkern1mu{\cdotp}}}%
\def\RIfM@{\relax\protect\ifmmode}
\def\text{\RIfM@\expandafter\text@\else\expandafter\mbox\fi}
\let\nfss@text\text
\def\text@#1{\mathchoice
   {\textdef@\displaystyle\f@size{#1}}%
   {\textdef@\textstyle\tf@size{\firstchoice@false #1}}%
   {\textdef@\textstyle\sf@size{\firstchoice@false #1}}%
   {\textdef@\textstyle \ssf@size{\firstchoice@false #1}}%
   \glb@settings}
\def\textdef@#1#2#3{\hbox{{%
                    \everymath{#1}%
                    \let\f@size#2\selectfont
                    #3}}}
\newif\iffirstchoice@
\def\Let@{\relax\iffalse{\fi\let\\=\cr\iffalse}\fi}%
\def\vspace@{\def\vspace##1{\crcr\noalign{\vskip##1\relax}}}%
\def\multilimits@{\bgroup\vspace@\Let@
 \baselineskip\fontdimen10 \scriptfont\tw@
 \advance\baselineskip\fontdimen12 \scriptfont\tw@
 \lineskip\thr@@\fontdimen8 \scriptfont\thr@@
 \lineskiplimit\lineskip
 \vbox\bgroup\ialign\bgroup\hfil$\m@th\scriptstyle{##}$\hfil\crcr}%
\def\Sb{_\multilimits@}%
\def\endSb{\crcr\egroup\egroup\egroup}%
\def\Sp{^\multilimits@}%
\newdimen\ex@
\def\rightarrowfill@#1{$#1\m@th\mathord-\mkern-6mu\cleaders
 \hbox{$#1\mkern-2mu\mathord-\mkern-2mu$}\hfill
 \mkern-6mu\mathord\rightarrow$}%
\def\leftarrowfill@#1{$#1\m@th\mathord\leftarrow\mkern-6mu\cleaders
 \hbox{$#1\mkern-2mu\mathord-\mkern-2mu$}\hfill\mkern-6mu\mathord-$}%
\def\leftrightarrowfill@#1{$#1\m@th\mathord\leftarrow
\mkern-6mu\cleaders
 \hbox{$#1\mkern-2mu\mathord-\mkern-2mu$}\hfill
 \mkern-6mu\mathord\rightarrow$}%
\def\overrightarrow{\mathpalette\overrightarrow@}%
\def\overrightarrow@#1#2{\vbox{\ialign{##\crcr\rightarrowfill@#1\crcr
 \noalign{\kern-\ex@\nointerlineskip}$\m@th\hfil#1#2\hfil$\crcr}}}%
\def\overleftarrow{\mathpalette\overleftarrow@}%
\def\overleftarrow@#1#2{\vbox{\ialign{##\crcr\leftarrowfill@#1\crcr
 \noalign{\kern-\ex@\nointerlineskip}$\m@th\hfil#1#2\hfil$\crcr}}}%
\def\overleftrightarrow{\mathpalette\overleftrightarrow@}%
\def\overleftrightarrow@#1#2{\vbox{\ialign{##\crcr
   \leftrightarrowfill@#1\crcr
 \noalign{\kern-\ex@\nointerlineskip}$\m@th\hfil#1#2\hfil$\crcr}}}%
\def\underrightarrow{\mathpalette\underrightarrow@}%
\def\underrightarrow@#1#2{\vtop{\ialign{##\crcr$\m@th\hfil#1#2\hfil
  $\crcr\noalign{\nointerlineskip}\rightarrowfill@#1\crcr}}}%
\def\underleftarrow{\mathpalette\underleftarrow@}%
\def\underleftarrow@#1#2{\vtop{\ialign{##\crcr$\m@th\hfil#1#2\hfil
  $\crcr\noalign{\nointerlineskip}\leftarrowfill@#1\crcr}}}%
\def\underleftrightarrow{\mathpalette\underleftrightarrow@}%
\def\underleftrightarrow@#1#2{\vtop{\ialign{##\crcr$\m@th
  \hfil#1#2\hfil$\crcr
 \noalign{\nointerlineskip}\leftrightarrowfill@#1\crcr}}}%
\def\qopnamewl@#1{\mathop{\operator@font#1}\nlimits@}
\let\nlimits@\displaylimits
\def\setboxz@h{\setbox\z@\hbox}
\def\varlim@#1#2{\mathop{\vtop{\ialign{##\crcr
 \hfil$#1\m@th\operator@font lim$\hfil\crcr
 \noalign{\nointerlineskip}#2#1\crcr
 \noalign{\nointerlineskip\kern-\ex@}\crcr}}}}
 \def\rightarrowfill@#1{\m@th\setboxz@h{$#1-$}\ht\z@\z@
  $#1\copy\z@\mkern-6mu\cleaders
  \hbox{$#1\mkern-2mu\box\z@\mkern-2mu$}\hfill
  \mkern-6mu\mathord\rightarrow$}
\def\leftarrowfill@#1{\m@th\setboxz@h{$#1-$}\ht\z@\z@
  $#1\mathord\leftarrow\mkern-6mu\cleaders
  \hbox{$#1\mkern-2mu\copy\z@\mkern-2mu$}\hfill
  \mkern-6mu\box\z@$}
\def\projlim{\qopnamewl@{proj\,lim}}
\def\injlim{\qopnamewl@{inj\,lim}}
\def\varinjlim{\mathpalette\varlim@\rightarrowfill@}
\def\varprojlim{\mathpalette\varlim@\leftarrowfill@}
\def\varliminf{\mathpalette\varliminf@{}}
\def\varliminf@#1{\mathop{\underline{\vrule\@depth.2\ex@\@width\z@
   \hbox{$#1\m@th\operator@font lim$}}}}
\def\varlimsup{\mathpalette\varlimsup@{}}
\def\varlimsup@#1{\mathop{\overline
  {\hbox{$#1\m@th\operator@font lim$}}}}
\def\dfrac#1#2{{\displaystyle {#1 \over #2}}}%
\def\align{\@verbatim \frenchspacing\@vobeyspaces \@alignverbatim
You are using the "align" environment in a style in which it is not defined.}
\let\csname endalign*\endcsname =\endtrivlist
\def\alignat{\@verbatim \frenchspacing\@vobeyspaces \@alignatverbatim
You are using the "alignat" environment in a style in which it is not defined.}
\let\csname endalignat*\endcsname =\endtrivlist
\def\xalignat{\@verbatim \frenchspacing\@vobeyspaces \@xalignatverbatim
You are using the "xalignat" environment in a style in which it is not defined.}
\let\csname endxalignat*\endcsname =\endtrivlist
\def\gather{\@verbatim \frenchspacing\@vobeyspaces \@gatherverbatim
You are using the "gather" environment in a style in which it is not defined.}
\let\csname endgather*\endcsname =\endtrivlist
\def\multiline{\@verbatim \frenchspacing\@vobeyspaces \@multilineverbatim
You are using the "multiline" environment in a style in which it is not defined.}
\let\csname endmultiline*\endcsname =\endtrivlist
\def\arrax{\@verbatim \frenchspacing\@vobeyspaces \@arraxverbatim
You are using a type of "array" construct that is only allowed in AmS-LaTeX.}
\def\tabulax{\@verbatim \frenchspacing\@vobeyspaces \@tabulaxverbatim
You are using a type of "tabular" construct that is only allowed in AmS-LaTeX.}
\let\csname endarrax*\endcsname =\endtrivlist
\let\csname endtabulax*\endcsname =\endtrivlist
\def\@@eqncr{\let\@tempa\relax
    \ifcase\@eqcnt \def\@tempa{& & &}\or \def\@tempa{& &}%
      \else \def\@tempa{&}\fi
     \@tempa
     \if@eqnsw
        \iftag@
           \@taggnum
        \else
           \@eqnnum\stepcounter{equation}%
        \fi
     \fi
     \global\tag@false
     \global\@eqnswtrue
     \global\@eqcnt\z@\cr}
 \def\endequation{%
     \ifmmode\ifinner 
      \iftag@
        \addtocounter{equation}{-1} 
        $\hfil
           \displaywidth\linewidth\@taggnum\egroup \endtrivlist
        \global\tag@false
        \global\@ignoretrue   
      \else
        $\hfil
           \displaywidth\linewidth\@eqnnum\egroup \endtrivlist
        \global\tag@false
        \global\@ignoretrue 
      \fi
     \else   
      \iftag@
        \addtocounter{equation}{-1} 
        \eqno \hbox{\@taggnum}
        \global\tag@false%
        $$\global\@ignoretrue
      \else
        \eqno \hbox{\@eqnnum}
        $$\global\@ignoretrue
      \fi
     \fi\fi
 } 
 \newif\iftag@ \tag@false
 \def\tag{\@ifnextchar*{\@tagstar}{\@tag}}
 \def\@tag#1{%
     \global\tag@true
     \global\def\@taggnum{(#1)}}
 \def\@tagstar*#1{%
     \global\tag@true
     \global\def\@taggnum{#1}%
}
\begin{document}
\title{Non--cooperative Equilibria of Fermi Systems With Long Range
Interactions}
\author{J.-B. Bru}
\author{W. de Siqueira Pedra}
\date{\today }

\begin{abstract}
We define a Banach space $\mathcal{M}_{1}$ of models for fermions or quantum
spins in the lattice with long range interactions and explicit the structure
of (generalized) equilibrium states for any $\mathfrak{m}\in \mathcal{M}_{1}$%
. In particular, we give a first answer to an old open problem in
mathematical physics -- first addressed by Ginibre in 1968 within a
different context -- about the validity of the so--called Bogoliubov
approximation on the level of states. Depending on the model $\mathfrak{m}%
\in \mathcal{M}_{1}$, our method provides a systematic way to study all its
correlation functions and can thus be used to analyze the physics of long
range interactions. Furthermore, we show that the thermodynamics of long
range models $\mathfrak{m}\in \mathcal{M}_{1}$ is governed by the
non--cooperative equilibria of a zero--sum game, called here the
thermodynamic game. \bigskip

\noindent\textbf{MSC2010:} (Primary) 82C10, 82C20, 82C22, 47D06, 58D25;
(Secondary) 82C70, 82C44, 34G10
\end{abstract}

\maketitle

\newpage 

\tableofcontents%

\chapter*{Preface}

\setcounter{equation}{0}%
States are the positive and normalized linear functionals on a $\ast $%
--algebra $\mathcal{U}$ and forms a convex set $E$. This set is weak$^{\ast
} $--compact when $\mathcal{U}$ is a unital $C^{\ast }$--algebra and it is
even metrizable if $\mathcal{U}$ is separable, cf. \cite[Theorem 3.16]{Rudin}%
. The structure of the set $E$ of states is then satisfactorily described by
the Choquet theorem \cite{Alfsen,Phe}: Any state has a unique decomposition
as an integral on extreme states of $E$.

Special subsets $\mathit{\Omega }\subseteq E$ of states on $\mathcal{U}$ are
of particular importance in statistical physics, for instance, if $\mathcal{U%
}$ is the observable ($C^{\ast }$--) algebra of Fermi or quantum spin
systems on a lattice $\mathbb{Z}^{d}$ ($d\geq 1$). In this case, one of the
main issues is to understand the limit $l\rightarrow \infty $ of sequences
of (local) Gibbs equilibrium states 
\begin{equation*}
\rho _{l}\left( \cdot \right) =\frac{\mathrm{Trace}\left( \cdot \ \mathrm{e}%
^{-\beta U_{l}}\right) }{\mathrm{Trace}\left( \mathrm{e}^{-\beta
U_{l}}\right) }
\end{equation*}%
defined, for all $\beta >0$ and $l\in \mathbb{N}$, from self--adjoint
operators $U_{l}\in \mathcal{U}$. In quantum statistical mechanics, $\beta
>0 $ is the inverse temperature, $U_{l}$ represents the energy observable of
particles enclosed in a finite box $\Lambda _{l}\subseteq \mathbb{Z}^{d}$,
and the limit $l\rightarrow \infty $ is such that $\Lambda _{l}\nearrow 
\mathbb{Z}^{d}$ (thermodynamic limit). For instance, $l$ can be the side
length of a cubic box $\Lambda _{l}$. As $E$ is weak$^{\ast }$--compact, any
sequence of states $\rho _{l}\in E$ converges -- along a subsequence --
towards an \emph{equilibrium state} $\omega \in E$ as $l\rightarrow \infty $%
. An explicit characterization of the limit state $\omega \in E$ is a rather
difficult issue in most interesting cases.

Taking $U_{l}$ from \emph{local }(i.e., short range) translation invariant
interactions $\Phi $ and by conveniently choosing boundary conditions, the
limit state $\omega \in E$ is found to be a solution of a variational
problem on the (convex and weak$^{\ast }$--compact) set $E_{1}\subseteq E$
of translation invariant states, i.e., it minimizes a weak$^{\ast }$--lower
semi--continuous functional $f_{\Phi }$ on $E_{1}$. This result is standard
for quantum spin systems, see e.g. \cite[Chapter II]{Israel} or \cite[
Section 6.2]{BrattelliRobinson}. $f_{\Phi }$ and its minimizers are called,
respectively, the free--energy density functional and equilibrium states of
the system under consideration.

Fermion systems on a lattice correspond to choose the $C^{\ast }$--algebra $%
\mathcal{U}$ as the inductive limit of the net of complex Clifford algebras $%
\mathcal{U}_{\Lambda }$, $\Lambda \subseteq \mathbb{Z}^{d}$, $|\Lambda
|<\infty $, generated by the elements\footnote{$a_{x,s}$ and $a_{x,s}^{+}$
are the annihilation and creation operators of a particle at lattice
position $x$.} $a_{x,\mathrm{s}}$ and $a_{x,\mathrm{s}}^{+}$ satisfying the
so--called canonical anti--commutation relations (CAR) for $x\in \Lambda $
and $\mathrm{s}\in \mathrm{S}$. Here, the finite set $\mathrm{S}$
corresponds to the internal degrees of freedom (spin) of particles. Quantum
spin systems on a lattice are described by infinite tensor products of
finite dimensional $C^{\ast }$--algebras attached to each site $x\in \mathbb{%
Z}^{d}$. As a consequence, in contrast to lattice quantum spins, elements $%
A\in \mathcal{U}_{\Lambda }$ and $B\in \mathcal{U}_{\Lambda ^{\prime }}$ in
disjoint regions of the lattice ($\Lambda \cap \Lambda ^{\prime }=\emptyset $%
) do \emph{not} generally commute with each other. A study of equilibrium
states of lattice fermions similar to the one for lattice quantum spins is
hence more involved and was only\footnote{%
There are some results on the level of the pressure \cite%
{RaggioWerner1,RaggioWerner2} by using the quantum spin representation of
fermion systems for a specific class of models} performed in 2004 by Araki
and Moriya \cite{Araki-Moriya}. In particular, the limit state $\omega \in E$
is again a minimizer of a weak$^{\ast }$--lower semi--continuous functional $%
f_{\Phi }$ on $E_{1}$.

All these results \cite{Israel,BrattelliRobinson,Araki-Moriya} use Banach
spaces of \emph{local} interactions. Unfortunately, these Banach spaces are
too small to include all physically interesting systems. Indeed, physically
speaking, local interaction mainly means that the interaction between
particles is \emph{short} range, i.e., it has to decrease sufficiently fast
as the inter--particle distance increases. Nevertheless, \emph{long--range}
interactions are also fundamental as they explain important physical
phenomena like conventional superconductivity.

In this monograph, we construct a Banach space $\mathcal{M}_{1}$ of
translation invariant Fermi models including a class of long--range
interactions on the lattice. We restrict our analysis to translation
invariant Fermi systems, but we emphasize that all our studies can also be
performed for quantum spins\footnote{%
In fact, quantum spin systems are easier to analyze than fermion systems.}
as well as for (not necessarily translation invariant, but only)
periodically invariant systems. Then we generalize\footnote{%
In \cite{Araki-Moriya} the authors use a slightly different Banach space of
local interactions, see Remark \ref{remark general interaction}.} some
previous results of \cite{Israel,BrattelliRobinson,Araki-Moriya} to the
larger space $\mathcal{M}_{1}$. By conveniently choosing boundary
conditions, we show, in particular, that the sequence of Gibbs states $\rho
_{l}\in E$ defined from any $\mathfrak{m}\in \mathcal{M}_{1}$ converges
along a subsequence to a minimizer of the $\Gamma $--regularization $\Gamma
(f_{\mathfrak{m}}^{\sharp })$ (cf. (\ref{gamma+})) of the so--called free
energy density functional $f_{\mathfrak{m}}^{\sharp }$. Note that $f_{%
\mathfrak{m}}^{\sharp }$ is affine, but \emph{possibly not }weak$^{\ast }$%
--lower semi--continuous. Nevertheless, we prove that all weak$^{\ast }$%
--limit points of any sequence $\{\rho _{i}\}_{i=1}^{\infty }$ of its
approximating minimizers\footnote{%
It means that $\underset{i\rightarrow \infty }{\lim }f_{\mathfrak{m}}(\rho
_{i})=\inf \,f_{\mathfrak{m}}(E_{1})$} belong to the closed, convex, and weak%
$^{\ast }$--compact set $\mathit{\Omega }_{\mathfrak{m}}^{\sharp }$ of
minimizers of $\Gamma (f_{\mathfrak{m}}^{\sharp })$. Observe that $\Gamma
(f_{\mathfrak{m}}^{\sharp })$ is the largest convex and weak$^{\ast }$%
--lower semi--continuous minorant of $f_{\mathfrak{m}}^{\sharp }$ and its
minimizers are called \emph{generalized equilibrium states}. Minimizers of $%
f_{\mathfrak{m}}^{\sharp }$ are (usual) equilibrium states and form a subset
of $\mathit{\Omega }_{\mathfrak{m}}^{\sharp }$.

If the long--range component of the interaction is purely attractive then $%
\mathit{\Omega }_{\mathfrak{m}}^{\sharp }$ is always a face of $E_{1}$.
However, in the general case, $\mathit{\Omega }_{\mathfrak{m}}^{\sharp }$ is
only a \emph{subset} of a non--trivial face of $E_{1}$. From the Choquet
theorem (see, e.g., \cite[p. 14]{Phe}), any generalized equilibrium state $%
\omega \in \mathit{\Omega }_{\mathfrak{m}}^{\sharp }$ of an arbitrary
long--range model $\mathfrak{m}\in \mathcal{M}_{1}$ has a decomposition in
terms of extreme states of $\mathit{\Omega }_{\mathfrak{m}}^{\sharp
}\subseteq E_{1}$. As $E_{1}$ is known to be a Choquet simplex, this
decomposition is unique whenever $\mathit{\Omega }_{\mathfrak{m}}^{\sharp }$
is a face. Additionally, extreme states are shown to be minimizers of an
explicitly given weak$^{\ast }$--lower semi--continuous (possibly neither
convex nor concave) functional $g_{\mathfrak{m}}$. We also show that --
exactly as in the case of local interactions -- the set $\mathit{\Omega }_{%
\mathfrak{m}}^{\sharp }$ of generalized equilibrium states can be identified
with the set of all continuous tangent functionals at the point $\mathfrak{m}%
\in \mathcal{M}_{1}$ of a convex and continuous functional $\mathrm{P}_{%
\mathfrak{m}}^{\sharp }$, the so--called pressure, on the Banach space $%
\mathcal{M}_{1}$.

Note that non--uniqueness of generalized equilibrium states corresponds to
the existence of phase transitions for the considered model. This cannot be
seen for finite--volume systems. Indeed, the Gibbs equilibrium state is the 
\emph{unique} minimizer of the free energy at finite volume (Theorem \ref%
{passivity.Gibbs}). As a consequence, there are important differences
between the finite--volume system and its thermodynamic limit:

\begin{itemize}
\item \emph{Non--uniqueness of generalized t.i. equilibrium states.}
Similarly to the Gibbs state at finite volume, a generalized t.i.
equilibrium state $\omega \in \mathit{\Omega }_{\mathfrak{m}}^{\sharp }$
represents an infinite--volume thermal state at equilibrium. However, $%
\omega \in \mathit{\Omega }_{\mathfrak{m}}^{\sharp }$ may not be unique,
see, e.g., \cite[Section 6.2]{BruPedra1}. In fact, physically important
phase transitions are those for which the minimizers of $\Gamma (f_{%
\mathfrak{m}}^{\sharp })$ break initial symmetries of the system. This case
is called \emph{spontaneous symmetry breaking}. For concrete local
interactions, such a phenomenon is usually difficult to prove in the quantum
case, whereas there are many explicit models $\mathfrak{m}\in \mathcal{M}%
_{1} $ where it can easily be seen, see, e.g., \cite{BruPedra1,
BruPedraAniko}.\medskip

\item \emph{Space symmetry of generalized equilibrium states.} The Gibbs
equilibrium state minimizes the finite--volume free--energy density
functional over the set $E$ of all states. However, even if the interaction
is translation invariant, it may possibly not converge to a t.i. state in
the thermodynamic limit. In particular, the weak$^{\ast }$--limit state may
not belong to $\mathit{\Omega }_{\mathfrak{m}}^{\sharp }$. In other words, a
t.i. (physical) system can lead to periodic (or more complicated
non--translation invariant) structures. Such a phenomenon could, for
instance, explain the appearance of periodic superconducting phases recently
observed \cite{LeBoeuf,Pfleiderer}. \medskip
\end{itemize}

Observe that, in general, the solutions of the variational problems given in 
\cite{Israel,BrattelliRobinson,Araki-Moriya} for local interactions cannot
be computed explicitly. The variational problem%
\begin{equation*}
\mathrm{P}_{\mathfrak{m}}^{\sharp }=-\inf f_{\mathfrak{m}}^{\sharp
}(E_{1})=-\inf \;\Gamma (f_{\mathfrak{m}}^{\sharp })(E_{1})
\end{equation*}%
generalizing previous results on local interactions to any $\mathfrak{m}\in 
\mathcal{M}_{1}$ is, a priori, even more difficult. We prove, however, that
this minimization problem can be explicitly analyzed from variational
problems with local interactions. This strong simplification is related to
an old open problem in mathematical physics -- first addressed by Ginibre 
\cite[p. 28]{Ginibre} in 1968 within a different context -- about the
validity of the so--called Bogoliubov approximation on the level of states.
Indeed, we give a first answer to this problem in the special class $%
\mathcal{M}_{1}$ of models $\mathfrak{m}$ by showing that any extreme
generalized equilibrium state $\omega \in \mathit{\Omega }_{\mathfrak{m}%
}^{\sharp }$ is an equilibrium state of an effective local interaction $\Phi
_{\omega }$. Such extreme generalized equilibrium states satisfy
Euler--Lagrange equations called \emph{gap equations} in the Physics
literature. In fact, when the correlation functions of the effective local
interaction $\Phi _{\omega }$ turn out to be accessible, our method provides
a systematic way to analyze, at once, \emph{all} correlation functions of
the given long--range model $\mathfrak{m}\in \mathcal{M}_{1}$. Applications
of our method include: A full analysis (postponed to separated papers) of
equilibrium states of BCS--type models, the explicit description of models
showing qualitatively the same density dependency of the critical
temperature observed in high--$T_{c}$\emph{\ }superconductors \cite%
{BruPedra1, BruPedraAniko}, etc.

One important consequence of the detailed analysis of the set $\mathit{%
\Omega }_{\mathfrak{m}}^{\sharp }$ of generalized equilibrium states is the
fact that the thermodynamics of models $\mathfrak{m}\in \mathcal{M}_{1}$ is
governed by the following \emph{two--person zero--sum game}: For any model $%
\mathfrak{m}$, we define a functional 
\begin{equation*}
\mathfrak{f}_{\mathfrak{m}}^{\mathrm{ext}}:L_{-}^{2}\times \mathrm{C}%
(L_{-}^{2},L_{+}^{2})\rightarrow \mathbb{R}.
\end{equation*}%
Here, $L_{\pm }^{2}$ are two orthogonal sub--spaces of a Hilbert space $%
L^{2}(\mathcal{A},\mathbb{C})$ and $\mathrm{C}(L_{-}^{2},L_{+}^{2})$ is the
set of continuous maps from $L_{-}^{2}$ to $L_{+}^{2}$, respectively endowed
with the weak and norm topologies. The set $L_{-}^{2}$ is seen as the set of
strategies of the \textquotedblleft attractive\textquotedblright\ player
with loss function $\mathfrak{f}_{\mathfrak{m}}^{\mathrm{ext}}$ and $\mathrm{%
C}(L_{-}^{2},L_{+}^{2})$ is the set of strategies of the \textquotedblleft
repulsive\textquotedblright\ player with loss function $-\mathfrak{f}_{%
\mathfrak{m}}^{\mathrm{ext}}$. This game has a non--cooperative equilibrium
and the value of the game is precisely $-\mathrm{P}_{\mathfrak{m}}^{\sharp }$%
. Moreover, for any $\mathfrak{m}\in \mathcal{M}_{1}$, equilibria of this
game classify extreme generalized equilibrium states in $\mathit{\Omega }_{%
\mathfrak{m}}^{\sharp }$ in the following sense: There is a set $\{\mathfrak{%
e}_{a}\}_{a\in \mathcal{A}}$ of observables such that, for any extreme state 
$\hat{\omega}\in \mathit{\Omega }_{\mathfrak{m}}^{\sharp }$, there is a
non--cooperative equilibrium 
\begin{equation*}
(d_{a,-},\mathrm{r}_{+})\in L_{-}^{2}\times \mathrm{C}(L_{-}^{2},L_{+}^{2})
\end{equation*}%
with%
\begin{equation*}
\hat{\omega}\left( \mathfrak{e}_{a}\right) =d_{a,-}+\mathrm{r}_{+}\left(
d_{a,-}\right) \in L^{2}(\mathcal{A},\mathbb{C}).
\end{equation*}%
For a more precise definition of $(d_{a,-},\mathrm{r}_{+})$, see (\ref{eq
conserve strategy}) and (\ref{thermodyn decision rule}). Conversely, for
each non--cooperative equilibrium 
\begin{equation*}
(d_{a,-},\mathrm{r}_{+})\in L_{-}^{2}\times \mathrm{C}(L_{-}^{2},L_{+}^{2}),
\end{equation*}%
there is a -- not necessarily extreme -- $\omega \in \mathit{\Omega }_{%
\mathfrak{m}}^{\sharp }$ satisfying the above equation.

This monograph is organized as follows. In Chapter {\ref{Section definition}%
, we briefly explain the mathematical framework of Fermi systems on a
lattice. }Then {the main results concerning the thermodynamic study of any }$%
\mathfrak{m}\in \mathcal{M}_{1}$ are formulated in Chapter {\ref{equilibrium}%
. Note that a discussion on previous results related to the ones presented
here is given in Section \ref{Concluding remark}. }In order to keep the main
issues as transparent as possible, we reduce the technical aspects to a
minimum in Chapters {\ref{Section definition} and \ref{equilibrium}, which
forms Part \ref{part I}. Our main results are Theorems }\ref{BCS main
theorem 1}, \ref{theorem structure of omega copy(1)}, \ref{theorem saddle
point}, and \ref{theorem structure of omega}. Examples of applications are
given in Section \ref{example of application section}.

Part \ref{part II} collects complementary important results and corresponds
to Chapters {\ref{section pbc}--}\ref{Section appendix}. In particular,
Chapter {\ref{section pbc} is an account on periodic boundary conditions
which ensure the }weak$^{\ast }$--{convergence as }$l\rightarrow \infty ${\
of the Gibbs equilibrium states }$\rho _{l}${\ to a generalized equilibrium
state }$\omega \in \mathit{\Omega }_{\mathfrak{m}}^{\sharp }${. In Chapter %
\ref{section set of invarant states} we analyze in details the} set of
periodic states. Except Sections {\ref{Section properties of delta} and \ref%
{section 6.6}}, this analysis is only an adaptation of known results for
quantum spin systems. Chapter \ref{Stoermer} explains permutation invariant
models in relation with the St{\o }rmer theorem \cite{BruPedra1,Stormer} for
permutation invariant states on the CAR algebra because they {are
technically important for the derivation of the variational problem }$%
\mathrm{P}_{\mathfrak{m}}^{\sharp }=-\inf f_{\mathfrak{m}}^{\sharp }(E_{1})${%
\ for the pressure}. Chapters \ref{section proof of theorem main}--\ref%
{Section generalized eq state-effective theory} give the detailed proofs of
the main theorems about the game theoretical issues and generalized
equilibrium states of long--range models. In particular, we analyze in
details in Chapters \ref{Section theorem saddle point}--\ref{Section
generalized eq state-effective theory} the relation between the
thermodynamics of general long--range models $\mathfrak{m}\in \mathcal{M}%
_{1} $ and effective local interactions $\Phi _{\omega }$. This is related
to the so--called approximating Hamiltonian method used on the level of the
pressure in \cite%
{Bogjunior,approx-hamil-method0,approx-hamil-method,approx-hamil-method2}.
We give in Chapter \ref{Section appendix} a short review on this subject as
well as on Gibbs equilibrium states, compact convex sets, Choquet simplices,
tangent functionals, the $\Gamma $--regularization, the Legendre--Fenchel
transform, and on two--person zero--sum games. All the material in Chapter %
\ref{Section appendix}, up to Lemma \ref{Bauer maximum principle bis} and
Theorems \ref{theorem trivial sympa 1}--\ref{theorem trivial sympa 1 copy(1)}%
, can be found in standard textbooks. These topics are concisely discussed
here to make our results accessible to a wide audience, since various fields
of theoretical physics and mathematics are concerned (non--linear analysis,
game theory, convex analysis, and statistical mechanics).

To conclude, we would like to thank Andr\'{e} Verbeure and Valentin A.
Zagrebnov for relevent references as well as Hans--Peter Heinz for many
interesting discussions and important hints about convex analysis and game
theory. We are also very grateful to Volker Bach and Jakob Yngvason for
their hospitality at the Erwin Schr\"{o}dinger International Institute for
Mathematical Physics, at the Physics University of Vienna, and at the
Institute of Mathematics of the Johannes Gutenberg University. Finally, we
thank the referee for his work and constructive criticisms.\bigskip

\noindent \textbf{Remark on the present postprint:} This manuscript has been
originally published in 2013 in \textit{Memoirs of the AMS} (volume 224, no.
1052). This postprint is a corrected version of this publication, but the
historical part has not been updated and thus runs until 2013. We are also
very grateful to S\'{e}bastien Breteaux for pointing out several mistakes
and suggesting various improvements on the text.

\aufm{Jean-Bernard Bru and Walter de Siqueira Pedra}

\mainmatter

\part{Main Results and Discussions\label{part I}}

\chapter{Fermi Systems on Lattices\label{Section definition}}

In Section \ref{Section fermions algebra} we define fermion (field) algebras 
$\mathcal{U}$. Self--adjoint elements of these $C^{\ast }$--algebras $%
\mathcal{U}$ correspond to observables, i.e., physical quantifies which can
be measured for fermion particles on a lattice $\mathfrak{L}$. Fermion
algebras are also referred to CAR algebras in the literature. For technical
simplicity, we only consider cubic lattices $\mathfrak{L=}\mathbb{Z}^{d}$, $%
d\in \mathbb{N}$.

The study of a given physical system needs the additional concept of state,
which represents the statistical distribution of outcomes of measurements on
this system related to any observable. In mathematics, states are identified
with the positive and normalized maps from $\mathcal{U}$ into $\mathbb{C}$.
In particular, states belong to the dual space $\mathcal{U}^{\ast }$ of the
Banach space $\mathcal{U}$. A class of states important in physics is given
by the sets $\{E_{\vec{\ell}}\}_{\vec{\ell}\in \mathbb{N}^{d}}$ of all $\vec{%
\ell}$--periodic states\ whose structure is described in Section \ref%
{section set of states}. The concept of ergodicity plays a key role in this
description and is strongly related to the ($\vec{\ell}$--) space--averaging
functional $\Delta _{A,\vec{\ell}}$, which is analyzed in details for $\vec{%
\ell}=(1,\cdots ,1)$ in Section \ref{Section space averaging}.

Fixing a physical system among all possible ones corresponds to fix a family
of self--adjoint (even) elements $U_{\Lambda }^{\Phi }$ of $\mathcal{U}$,
i.e., observables, which represents the total energy in the finite box $%
\Lambda \subseteq \mathfrak{L}$. These elements are called in this monograph
internal energies and are also known in physics as Hamiltonians. In fact, we
are interested in infinite systems which result from the thermodynamic limit 
$\Lambda \nearrow \mathfrak{L}$ of finite--volume models defined from local
internal energies. To define such families of internal energies we can, for
instance, use a Banach space $\mathcal{W}_{1}$ of translation invariant
(t.i.) local interactions $\Phi $ which define an internal energy $%
U_{\Lambda }^{\Phi }\subseteq \mathcal{U}$ for any $\Lambda \subseteq 
\mathfrak{L}$. The detailed explanation of this construction is found in
Section \ref{Local interactions}. Observe, however, that this is not the
only reasonable way of defining internal energies. In the next chapter we
will generalize this procedure.

Finally, the state of a physical system in thermal equilibrium is defined by
a variational problem (cf. Section \ref{Section Gibbs equilibrium states}).
Any equilibrium state of a given system with interaction $\Phi \in \mathcal{W%
}_{1}$ minimizes the density of free energy $f_{\Phi }$ corresponding to
this interaction. This functional $f_{\Phi }$ is defined on $E_{\vec{\ell}}$
for any $\vec{\ell}\in \mathbb{N}^{d}$ and is a (weighted) sum of two
density functionals: The energy density functional $\rho \mapsto e_{\Phi
}(\rho )$, which correspond to the mean energy $\rho (U_{\Lambda }^{\Phi
})/|\Lambda |$ per volume when $\Lambda \nearrow \mathfrak{L}$ and $\rho \in
E_{\vec{\ell}}$, and the entropy density functional $\rho \mapsto s(\rho )$,
which measures, in a sense, the amount of randomness (per unit of volume)
carried by a state $\rho \in E_{\vec{\ell}}$ when $\Lambda \nearrow 
\mathfrak{L}$. The free energy, energy, and entropy functionals are
described in Section \ref{section entropie}. Such a variational principle
for equilibrium states implements the second law of thermodynamics because
minimizing the free energy density is equivalent to maximize the entropy
density at constant mean energy per unit of volume.

Note that all our studies can also be performed for quantum spins as well as
for (not necessarily translation invariant, but only) periodically invariant
systems. We concentrate our attention to fermion algebras as they are more
difficult to handle because of the non--commutativity of its elements on
different lattice sites, see Remark \ref{Quantum spin systems}. In fact, up
to Section \ref{Section space averaging}, the results presented in this
chapter are known for quantum spin systems (see, e.g., \cite{Israel}). In
this monograph, we extend them to Fermi systems by using results of Araki
and Moriya \cite{Araki-Moriya} (see also Remark \ref{remark general
interaction}). The material presented in Section \ref{Section space
averaging} is new for both quantum spins and Fermi systems. Note that the
detailed proofs are postponed until Chapter \ref{section set of invarant
states}. Sections \ref{Section Compact convex sets and Choquet simplices}-%
\ref{Section gamma regularization} are prerequisites.

\section{Local fermion algebras\label{Section fermions algebra}}

Let $\mathfrak{L}:=\mathbb{Z}^{d}$ be the $d$--dimensional cubic lattice and 
$\mathcal{H}$ be a finite dimensional Hilbert space with orthonormal basis $%
\{e_{\mathrm{s}}\}_{\mathrm{s}\in \mathrm{S}}$.

\begin{notation}
\label{Notation1}Here, we use the convention that $\mathfrak{L}$ stands for $%
\mathbb{Z}^{d}$ as seen as a set (lattice), whereas with $\mathbb{Z}^{d}$
the abelian group $(\mathbb{Z}^{d},+)$ is meant.
\end{notation}

\begin{remark}[Lattices $\mathfrak{L}$]
\mbox{ }\newline
\index{Lattice}The lattice $\mathfrak{L}$ is taken to be a cubic one because
it is technically easier, but this choice is not necessary for our proofs.
\end{remark}

For any set $M$, we define $\mathcal{P}_{f}(M)$ to be the set of all finite
subsets of $M$. In the special case where $M=\mathfrak{L}$ we use below the
sequence of cubic boxes 
\begin{equation}
\Lambda _{l}:=\{x\in \mathfrak{L}\;:\;|x_{i}|\leq l,\,i=1,\ldots ,d\}\in 
\mathcal{P}_{f}(\mathfrak{L})  \label{cubic box}
\end{equation}%
of the lattice $\mathfrak{L}$ with volume $|\Lambda _{l}|=(2l+1)^{d}$ for $%
l\in \mathbb{N}$.

\begin{remark}[Van Hove nets]
\mbox{ }\newline
\index{Van Hove net}The sequence $\{\Lambda _{l}\}_{l\in \mathbb{N}}$ is
used to define the thermodynamic limit. It is a technically convenient
choice, but it is not necessary in our proofs. The minimal requirement on
any net $\{\Lambda _{i}\}_{i\in I}$ of finite boxes is that the volume $%
|\partial \Lambda _{i}|$ of the boundaries\footnote{%
By fixing $m\geq 1$, the boundary $\partial \Lambda $ of any $\Lambda
\subseteq \mathfrak{L}$ is defined by $\partial \Lambda :=\{x\in \Lambda
:\exists y\in \mathfrak{L}\backslash \Lambda \mathrm{\ with\ }d(x,y)\leq m\}$%
, see (\ref{def.dist}) below for the definition of the metric $d(x,y)$.} $%
\partial \Lambda _{i}\subseteq \Lambda _{i}\in \mathcal{P}_{f}(\mathfrak{L})$
must be negligible with respect to (w.r.t.) the volume $|\Lambda _{i}|$ of $%
\Lambda _{i}$ at \textquotedblleft large\textquotedblright\ $i\in I$, i.e., $%
\lim\limits_{I}\{|\partial \Lambda _{i}|/|\Lambda _{i}|\}=0.$ Such families $%
\{\Lambda _{i}\}_{i\in I}$ of subsets are known as Van Hove nets, see, e.g., 
\cite{Araki-Moriya}. Note that also the condition $\Lambda _{l}\subseteq
\Lambda _{l+1}$ is not necessary and it suffices to impose that, for any $%
\Lambda \in \mathcal{P}_{f}(\mathfrak{L})$, there is $i_{\Lambda }\in I$
such that $\Lambda \subseteq \Lambda _{i}$ for all $i\geq i_{\Lambda }$.
\end{remark}

For any $\Lambda \in \mathcal{P}_{f}(\mathfrak{L})$, let $\mathcal{U}%
_{\Lambda }$ be the complex 
\index{Clifford algebra}Clifford algebra with identity $\mathbf{1}$ and
generators $\{a_{x,\mathrm{s}},a_{x,\mathrm{s}}^{+}\}_{x\in \Lambda ,\mathrm{%
s}\in \mathrm{S}}$ satisfying the so--called canonical anti--commutation
relations 
\index{CAR|textbf}(CAR):%
\begin{equation}
\left\{ 
\begin{array}{lll}
a_{x,\mathrm{s}}a_{x^{\prime },\mathrm{s}^{\prime }}+a_{x^{\prime },\mathrm{s%
}^{\prime }}a_{x,\mathrm{s}} & = & 0, \\[0.15cm] 
a_{x,\mathrm{s}}^{+}a_{x^{\prime },\mathrm{s}^{\prime }}^{+}+a_{x^{\prime },%
\mathrm{s}^{\prime }}^{+}a_{x,\mathrm{s}}^{+} & = & 0, \\[0.15cm] 
a_{x,\mathrm{s}}a_{x^{\prime },\mathrm{s}^{\prime }}^{+}+a_{x^{\prime },%
\mathrm{s}^{\prime }}^{+}a_{x,\mathrm{s}} & = & \delta _{x,x^{\prime
}}\delta _{\mathrm{s},\mathrm{s}^{\prime }}\mathbf{1}.%
\end{array}%
\right.  \label{CAR}
\end{equation}%
The set $\mathcal{U}_{\Lambda }$ is a $C^{\ast }$--algebra because it is
isomorphic to the algebra \QTR{cal}{B}$(\bigwedge \mathcal{H}_{\Lambda })$
of all bounded linear operators on the fermion Fock space $\bigwedge 
\mathcal{H}_{\Lambda }$, where%
\begin{equation*}
\mathcal{H}_{\Lambda }:=\bigoplus\limits_{x\in \Lambda }\mathcal{H}_{x},
\end{equation*}%
$\mathcal{H}_{x}$, $x\in \mathfrak{L}$, being copies of the finite
dimensional Hilbert space $\mathcal{H}$. For any $\Lambda \in \mathcal{P}%
_{f}(\mathfrak{L})$, $\mathcal{U}_{\Lambda }$ is called the \emph{local
fermion (field) algebras }of the lattice $\mathfrak{L}$. Indeed, in quantum
statistical mechanics $a_{x,\mathrm{s}}^{+}=(a_{x,\mathrm{s}})^{\ast }$ and $%
a_{x,\mathrm{s}}$ are interpreted, respectively, as the creation and
annihilation of a fermion with spin $\mathrm{s}\in \mathrm{S}$ at the
position $x\in \mathfrak{L}$ of the lattice, and the CAR (\ref{CAR})
implement the Pauli principle.

For any $\Lambda \subseteq \Lambda ^{\prime }\subseteq \Lambda ^{\prime
\prime }\in \mathcal{P}_{f}(\mathfrak{L})$, there are canonical inclusions $%
j_{\Lambda ,\Lambda ^{\prime }}:\mathcal{U}_{\Lambda }\rightarrow \mathcal{U}%
_{\Lambda ^{\prime }}$ satisfying $j_{\Lambda ^{\prime },\Lambda ^{\prime
\prime }}\circ j_{\Lambda ,\Lambda ^{\prime }}=j_{\Lambda ,\Lambda ^{\prime
\prime }}$ and $j_{\Lambda ,\Lambda ^{\prime }}(a_{x,\mathrm{s}})=a_{x,%
\mathrm{s}}$ for any $x\in \Lambda $ and $\mathrm{s}\in \mathrm{S}$. The
inductive limit of local algebras $\{\mathcal{U}_{\Lambda }\}_{\Lambda \in 
\mathcal{P}_{f}(\mathfrak{L})}$ is the $C^{\ast }$--algebra $\mathcal{U}$,
called the 
\index{Fermion field algebra}\emph{fermion (field) algebra} (also known as
the \emph{CAR algebra}). A dense subset of $\mathcal{U}$ is given by the $%
\ast $--algebra%
\begin{equation}
\mathcal{U}_{0}:=\bigcup\limits_{\Lambda \in \mathcal{P}_{f}(\mathfrak{L})}%
\mathcal{U}_{\Lambda }  \label{local elements}
\end{equation}%
of local elements, which implies the separability of $\mathcal{U}$ as $%
\mathcal{U}_{\Lambda }$ is a finite dimensional space for any $\Lambda \in 
\mathcal{P}_{f}(\mathfrak{L})$.

\begin{remark}[Quantum spin systems]
\label{Quantum spin systems}\mbox{ }\newline
\index{Quantum spin systems}For quantum spin systems, $\mathcal{U}$ would be
the infinite tensor product of finite dimensional $C^{\ast }$--algebras
attached to each site $x\in \mathbb{Z}^{d}$. All results of this monograph
hold in this case, but we concentrate our attention on fermion algebras as
they are more difficult to handle because of the non--commutativity of their
elements on different lattice sites.
\end{remark}

For any fixed $\theta \in \mathbb{R}/(2\pi \mathbb{Z)}$, the condition 
\begin{equation}
\sigma _{\theta }(a_{x,\mathrm{s}})=\mathrm{e}^{-i\theta }a_{x,\mathrm{s}}
\label{definition of gauge}
\end{equation}%
defines a unique automorphism $\sigma _{\theta }$ of the algebra $\mathcal{U}
$. A special role is played by $\sigma _{\pi }$. Elements $A,B\in \mathcal{U}
$ satisfying $\sigma _{\pi }(A)=A$ and $\sigma _{\pi }(B)=-B$ are
respectively called \emph{even} and 
\index{Odd!algebra element}\emph{odd}, whereas elements $A\in \mathcal{U}$
satisfying $\sigma _{\theta }(A)=A$ for any $\theta \in \lbrack 0,2\pi )$
are called 
\index{Gauge invariant!algebra elements}\emph{gauge invariant}. The set 
\begin{equation}
\mathcal{U}^{+}:=\{A\in \mathcal{U}\;:\;A-\sigma _{\pi }(A)=0\}\subseteq 
\mathcal{U}  \label{definition of even operators}
\end{equation}%
of all 
\index{Even!algebra elements}even elements and the set 
\begin{equation}
\mathcal{U}^{\circ }:=\bigcap\limits_{\theta \in \mathbb{R}/(2\pi \mathbb{Z)}%
}\{A\in \mathcal{U}\;:\;A=\sigma _{\theta }(A)\}\subseteq \mathcal{U}^{+}
\label{definition of gauge invariant operators}
\end{equation}%
of all gauge invariant elements are $\ast $--algebras. By continuity of $%
\sigma _{\theta }$, it follows that $\mathcal{U}^{+}$ and $\mathcal{U}%
^{\circ }$ are closed and hence $C^{\ast }$--algebras, respectively called 
\emph{sub--algebra of even elements} and \emph{fermion observable algebra}.

\begin{remark}[Gauge invariant projection]
\label{proj.gauge.inv}\mbox{ }\newline
\index{Gauge invariant!projection}By density of the $\ast $--algebra $%
\mathcal{U}_{0}$ of local elements, for any $A\in \mathcal{U}$, the map $%
\theta \mapsto \sigma _{\theta }(A)$ is continuous. Thus, for any $A\in 
\mathcal{U}$, the Riemann integral 
\begin{equation*}
\sigma ^{\circ }(A):=%
\frac{1}{2\pi }\int_{0}^{2\pi }\sigma _{\theta }(A)\ \mathrm{d}\theta
\end{equation*}%
defines a linear map $\sigma ^{\circ }:\mathcal{U}\rightarrow \mathcal{U}%
^{\circ }$, which is a projection on the fermion observable algebra $%
\mathcal{U}^{\circ }$, i.e., $\sigma ^{\circ }\circ \sigma ^{\circ }=\sigma
^{\circ }$.
\end{remark}

\begin{notation}[Gauge invariant objects]
\label{Notation2}\mbox{ }\newline
\index{Gauge invariant!notation}Any symbol with a circle $\circ $ as a
superscript (for instance, $\sigma ^{\circ }$) is, by definition, an object
related to gauge invariance.
\end{notation}

\section{States of Fermi systems on lattices\label{section set of states}}

As $\mathcal{U}$ is a Banach space, by Corollary \ref{thm locally convex
spacebis}, its dual $\mathcal{U}^{\ast }$ is a locally convex real space%
\footnote{%
We use here Rudin's definition, see Definition \ref{space}.} with respect to
(w.r.t.) the weak$^{\ast }$--topology, which is Hausdorff. Moreover, as $%
\mathcal{U}$ is separable, by Theorem \ref{Metrizability}, the weak$^{\ast }$%
--topology is metrizable on any weak$^{\ast }$--compact subset of $\mathcal{U%
}^{\ast }$ as, for instance, on the weak$^{\ast }$--compact convex set $%
E\subseteq \mathcal{U}^{\ast }$ of all \emph{states} on $\mathcal{U}$.

\index{States}States are linear functionals $\rho \in \mathcal{U}^{\ast }$
which are positive, i.e., for all $A\in \mathcal{U}$, $\rho (A^{\ast }A)\geq
0$, and normalized, i.e., $\rho (\mathbf{1})=1$. Equivalently, $\rho \in 
\mathcal{U}^{\ast }$ is a state iff $\rho (\mathbf{1})=1$ and $\Vert \rho
\Vert =1$ which clearly means that $E$ is a subset of the unit ball of $%
\mathcal{U}^{\ast }$. Note that any $\rho \in E$ is continuous and
Hermitian, i.e., for all $A\in \mathcal{U}$, $\rho (A^{\ast })=%
\overline{\rho (A)}$, and defines by restriction a state on the
sub--algebras $\mathcal{U}^{+}$, $\mathcal{U}^{\circ }$, and $\mathcal{U}%
_{\Lambda }$. For any $\Lambda \in \mathcal{P}_{f}(\mathfrak{L})$, we use $%
\rho _{\Lambda }$ and $E_{\Lambda }$ to denote, respectively, the
restriction of any $\rho \in E$ on the local sub--algebra $\mathcal{U}%
_{\Lambda }$ and the set of all states $\rho _{\Lambda }$ on $\mathcal{U}%
_{\Lambda }$.

\begin{notation}[States]
\label{Notation3}\mbox{ }\newline
\index{States!notation}The letters $\rho $, $\varrho $, and $\omega $ are
exclusively reserved to denote states.
\end{notation}

\index{States!invariant}Invariant states under the action of groups $G$ play
a crucial role in the sequel. In the special case where $G=(\mathbb{Z}%
^{d},+) $, the condition 
\begin{equation}
\alpha _{x}(a_{y,\mathrm{s}})=a_{y+x,\mathrm{s}}\ ,\quad \forall y\in 
\mathbb{Z}^{d},\;\forall \mathrm{s}\in \mathrm{S},  \label{transl}
\end{equation}%
defines a homomorphism $x\mapsto \alpha _{x}$ from $\mathbb{Z}^{d}$ to the
group of $\ast $--automorphisms of $\mathcal{U}$. In other words, the family
of $\ast $--automorphisms $\{\alpha _{x}\}_{x\in \mathfrak{L}}$ represents
here the action of the group of lattice translations on $\mathcal{U}$.
Consider now the sub--groups $G=(\mathbb{Z}_{%
\vec{\ell}}^{d},+)\subseteq (\mathbb{Z}^{d},+)$ with%
\begin{equation*}
\mathbb{Z}_{\vec{\ell}}^{d}:=\ell _{1}\mathbb{Z}\times \cdots \times \ell
_{d}\mathbb{Z},\qquad \vec{\ell}\in \mathbb{N}^{d}.
\end{equation*}%
Any state $\rho \in E$ satisfying $\rho \circ \alpha _{x}=\rho $ for all $%
x\in \mathbb{Z}_{\vec{\ell}}^{d}$ is called 
\index{States!l--periodic}$\mathbb{Z}_{%
\vec{\ell}}^{d}$\emph{--invariant} on $\mathcal{U}$ or $\vec{\ell}$\emph{%
--periodic}. The set of all $\mathbb{Z}_{\vec{\ell}}^{d}$--invariant states
is denoted by 
\begin{equation}
E_{\vec{\ell}}:=\bigcap\limits_{x\in \mathbb{Z}_{\vec{\ell}}^{d},\text{ }%
A\in \mathcal{U}}\{\rho \in \mathcal{U}^{\ast }\;:\;\rho (\mathbf{1}%
)=1,\;\rho (A^{\ast }A)\geq 0\text{\quad }\mathrm{with\ }\rho =\rho \circ
\alpha _{x}\}.  \label{periodic invariant states}
\end{equation}%
Note that $E_{1}:=E_{(1,\cdots ,1)}$ corresponds to the set of all 
\index{States!translation invariant}\emph{translation invariant} (t.i.)
states. The $%
\vec{\ell}$--periodicity of states yields a crucial property, deduced from
Corollary \ref{coro.even copy(1)}:

\begin{lemma}[$%
\vec{\ell}$--periodic states are even]
\label{coro.even}\mbox{ }\newline
Any 
\index{States!l--periodic}$\mathbb{Z}_{%
\vec{\ell}}^{d}$--invariant state $\rho $ is 
\index{Even!states}even, i.e., $\rho =\rho \circ \sigma _{\pi }$ with the
automorphism $\sigma _{\pi }$ defined by (\ref{definition of gauge}) for $%
\theta =\pi $.
\end{lemma}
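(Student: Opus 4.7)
The plan is to reduce to showing $\rho(A)=0$ whenever $A$ is odd (i.e., $\sigma_\pi(A)=-A$), since writing $A=A^{+}+A^{-}$ with $A^{\pm}=(A\pm\sigma_\pi(A))/2$, the identity $\rho\circ\sigma_\pi=\rho$ is equivalent to $\rho(A^{-})=0$ for every odd $A^{-}$. A continuity/density argument reduces this further to odd \emph{local} $A\in\mathcal{U}_\Lambda$ (apply the continuous projection $B\mapsto(B-\sigma_\pi(B))/2$ to local approximants of $A^{-}$; this projection preserves $\mathcal{U}_\Lambda$ because $\sigma_\pi(\mathcal{U}_\Lambda)=\mathcal{U}_\Lambda$). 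Splitting into self-adjoint and anti-self-adjoint parts (both still odd and local), it suffices to treat the case of self-adjoint odd local $A$.

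The key observation is the anticommutation of odd elements supported on disjoint lattice regions. Since $\mathbb{Z}_{\vec{\ell}}^{d}$ is infinite, one can choose translations $x_{1},\dots,x_{N}\in\mathbb{Z}_{\vec{\ell}}^{d}$ such that the shifted supports $\Lambda+x_{k}$ are pairwise disjoint. Setting $A_{k}:=\alpha_{x_{k}}(A)$, each $A_{k}$ is self-adjoint, odd, and of norm $\|A\|$; moreover, being odd with pairwise disjoint supports, $A_{k}A_{l}=-A_{l}A_{k}$ whenever $k\neq l$. Therefore the cross terms cancel in the square:
\begin{equation*}
\Bigl(\sum_{k=1}^{N}A_{k}\Bigr)^{2}=\sum_{k=1}^{N}A_{k}^{2}+\sum_{k\neq l}A_{k}A_{l}=\sum_{k=1}^{N}A_{k}^{2}.
\end{equation*}

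Since the $A_{k}$ are self-adjoint, the $C^{\ast}$-identity yields
\begin{equation*}
\Bigl\|\sum_{k=1}^{N}A_{k}\Bigr\|^{2}=\Bigl\|\sum_{k=1}^{N}A_{k}^{2}\Bigr\|\leq\sum_{k=1}^{N}\|A_{k}^{2}\|=N\|A\|^{2},
\end{equation*}
so $\|N^{-1}\sum_{k=1}^{N}A_{k}\|\leq\|A\|/\sqrt{N}$. Since $\rho$ is $\mathbb{Z}_{\vec{\ell}}^{d}$-invariant, $\rho(A_{k})=\rho(A)$ for each $k$, and hence $\rho\bigl(N^{-1}\sum_{k}A_{k}\bigr)=\rho(A)$. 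Combining the two,
\begin{equation*}
|\rho(A)|=\Bigl|\rho\Bigl(N^{-1}\sum_{k=1}^{N}A_{k}\Bigr)\Bigr|\leq\Bigl\|N^{-1}\sum_{k=1}^{N}A_{k}\Bigr\|\leq\frac{\|A\|}{\sqrt{N}}\xrightarrow[N\to\infty]{}0.
\end{equation*}

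The only genuinely non-routine step is noticing the trivialization $(\sum A_{k})^{2}=\sum A_{k}^{2}$ afforded by anticommutation together with self-adjointness; once this is in hand the argument is immediate. The remaining items---density of local elements, real/imaginary decomposition, and availability of arbitrarily many disjoint translates inside the sub-lattice $\mathbb{Z}_{\vec{\ell}}^{d}$---are standard and pose no real obstacle.
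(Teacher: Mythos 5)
Your proof is correct, and it takes a genuinely different (and more elementary) route than the paper. The paper does not argue at the level of expectation values at all: it first proves the stronger GNS--level statement that $P_{\rho }\pi _{\rho }(A)P_{\rho }=0$ for every odd $A$ (Corollary \ref{coro.even copy(1)}), using the von Neumann ergodic theorem to realize $P_{\rho }$ as a strong limit of the averaging contractions $P_{\rho }^{(L)}$, the asymptotic anticommutation $\lim_{|x|\rightarrow \infty }(A^{\ast }\alpha _{x}(A)+\alpha _{x}(A)A^{\ast })=0$, and a positivity argument (a sum of two positive operators vanishes only if both do); the lemma then follows from $\rho (A)=\langle \Omega _{\rho },P_{\rho }\pi _{\rho }(A)P_{\rho }\Omega _{\rho }\rangle $. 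You instead stay entirely inside the $C^{\ast }$--algebra: exact anticommutation of disjointly supported odd translates kills the cross terms in $(\sum_{k}A_{k})^{2}$, and the $C^{\ast }$--identity gives the $\Vert A\Vert /\sqrt{N}$ bound on the average, which together with translation invariance forces $\rho (A)=0$. Both arguments rest on the same underlying mechanism (oddness forces anticommutation at separated sites), but yours avoids the GNS representation and the ergodic theorem entirely. What the paper's heavier approach buys is the operator identity $P_{\rho }\pi _{\rho }(A)P_{\rho }=0$ itself, which is reused later (e.g., in the proof that $P_{\rho }\mathfrak{R}_{\rho }P_{\rho }$ is abelian, Lemma \ref{vonN.abelian}); your argument yields only the scalar statement $\rho (A)=0$, which is all the lemma asks for.
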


\noindent In other words, all $\mathbb{Z}_{\vec{\ell}}^{d}$--invariant
states $\rho \in E_{\vec{\ell}}$ must be the zero functional on the
sub--space of odd elements of $\mathcal{U}$. This symmetry property is a
necessary ingredient to study thermodynamics of Fermi systems.

The set $E_{\vec{\ell}}$ is clearly convex and weak$^{\ast }$--compact. So,
the Krein--Milman theorem%
\index{Krein--Milman theorem} (Theorem \ref{theorem Krein--Millman}) tells
us that it is the weak$^{\ast }$--closure of the convex hull of the
(non--empty) set 
\index{States!extreme}$\mathcal{E}_{%
\vec{\ell}}$ of its extreme points. (Here, $\mathcal{E}_{1}:=\mathcal{E}%
_{(1,\cdots ,1)}$ is the set of extreme points of the set $E_{1}$ of t.i.
states.) Since $E_{\vec{\ell}}$ is also metrizable (Theorem \ref%
{Metrizability}), from the Choquet theorem%
\index{Choquet theorem} (Theorem \ref{theorem choquet bis}), each state $%
\rho \in E_{%
\vec{\ell}}$ has a decomposition in terms of extreme states $\hat{\rho}\in 
\mathcal{E}_{\vec{\ell}}$ of $E_{\vec{\ell}}$. This decomposition is unique
and norm preserving by Lemma \ref{choquet unique}.

\begin{theorem}[Ergodic decomposition of states in $E_{\vec{\ell}}$]
\label{theorem choquet}\mbox{ }\newline
\index{States!ergodic decomposition}For any $\rho \in E_{%
\vec{\ell}}$, there is a unique probability measure $\mu _{\rho }$ on $E_{%
\vec{\ell}}$ supported on $\mathcal{E}_{\vec{\ell}}$ and representing $\rho
\in E_{\vec{\ell}}$:%
\begin{equation*}
\mu _{\rho }(\mathcal{E}_{\vec{\ell}})=1\text{\quad and\quad }\rho =\int_{E_{%
\vec{\ell}}}\mathrm{d}\mu _{\rho }(\hat{\rho})\;\hat{\rho}.
\end{equation*}%
Furthermore, the map $\rho \mapsto \mu _{\rho }$ is an isometry in the norm
of linear functionals, i.e., $\Vert \rho -\rho ^{\prime }\Vert =\Vert \mu
_{\rho }-\mu _{\rho ^{\prime }}\Vert $ for any $\rho ,\rho ^{\prime }\in E_{%
\vec{\ell}}$.
\end{theorem}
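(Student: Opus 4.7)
The plan is to assemble the statement from three ingredients already present (or implicit) in the paper's setup: (i) existence of a representing probability measure on extreme points via Choquet's theorem, (ii) the Choquet simplex structure of $E_{\vec{\ell}}$, which delivers uniqueness, and (iii) a norm-preservation argument tying $\|\mu_\rho - \mu_{\rho'}\|$ to $\|\rho - \rho'\|$. Since the statement explicitly cites the existence-and-uniqueness Choquet result and the lemma referenced as \ref{choquet unique}, most of the work is in verifying the hypotheses of these results for $E_{\vec{\ell}}$.

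First, I would observe that the weak$^{\ast}$-compactness of $E_{\vec{\ell}}$ is immediate from $E_{\vec{\ell}} \subseteq E$ being defined as the intersection of the weak$^{\ast}$-closed half-spaces $\{\rho : \rho(A) = \rho(\alpha_x(A))\}$ for $x \in \mathbb{Z}_{\vec{\ell}}^d$ and $A\in \mathcal{U}$ with the weak$^{\ast}$-compact state set $E$. Convexity is clear from linearity. Separability of $\mathcal{U}$ (via density of $\mathcal{U}_0$) ensures, through Theorem \ref{Metrizability}, that $E_{\vec{\ell}}$ is metrizable in the weak$^{\ast}$-topology. With these properties established, Theorem \ref{theorem choquet bis} yields a Borel probability measure $\mu_\rho$ on $E_{\vec{\ell}}$, supported on $\mathcal{E}_{\vec{\ell}}$, which represents $\rho$ in the sense of the displayed barycentric formula.

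The central step is uniqueness, which requires that $E_{\vec{\ell}}$ be a Choquet simplex. My plan is to invoke Lemma \ref{choquet unique} as the paper does, but the content behind this lemma is that extreme $\vec{\ell}$-periodic states are precisely the $\mathbb{Z}_{\vec{\ell}}^d$-ergodic states on $\mathcal{U}$, and two distinct ergodic states are mutually disjoint (in fact, mutually singular upon passing to the GNS representation and looking at the invariant subalgebra at infinity). Disjointness of the extreme points implies that the positive cone generated by $E_{\vec{\ell}}$ is a lattice in its own order, which is precisely the definition of simplex. This is where I expect the main technical difficulty: the argument typically relies on the structure of the fixed-point algebra under the $\mathbb{Z}_{\vec{\ell}}^d$-action and on an appropriate mean ergodic theorem on the CAR algebra, which must be carefully adapted from the standard quantum spin case because of the non-commutativity of fermionic elements on disjoint sites (Remark \ref{Quantum spin systems}). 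Evenness of all $\rho \in E_{\vec{\ell}}$ (Lemma \ref{coro.even}) is crucial here, since it makes the relevant commutation properties hold on the algebra that actually carries the states.

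Finally, for the isometry $\|\rho - \rho'\| = \|\mu_\rho - \mu_{\rho'}\|$, the idea is that on a Choquet simplex the barycenter map from probability measures to points is affine and injective, and disjointness of mutually distinct extreme states promotes this to a norm isometry: for $\rho, \rho' \in E_{\vec{\ell}}$ with Jordan-type decompositions into their common and singular parts with respect to $\mu_\rho, \mu_{\rho'}$, the orthogonality at the level of extreme-point supports transfers to orthogonality of $\rho - \rho'$ as a Hermitian functional. One direction ($\|\rho - \rho'\| \le \|\mu_\rho - \mu_{\rho'}\|$) is the general barycentric inequality; the reverse direction uses the disjointness of distinct ergodic states. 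This is the content of Lemma \ref{choquet unique}, which I would quote directly once the simplex property is in hand.
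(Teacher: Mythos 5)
Your proposal is correct in outline and agrees with the paper on the existence part (weak$^{\ast}$--compactness, convexity and metrizability of $E_{\vec{\ell}}$, then Theorem \ref{theorem choquet bis}), but for uniqueness you take a genuinely different route from the one the paper actually follows. You propose to first establish that $E_{\vec{\ell}}$ is a Choquet simplex --- via disjointness of distinct ergodic states and the lattice property of the cone, i.e.\ the standard decomposition--theoretic argument resting on the abelianness of $P_{\rho}\mathfrak{R}_{\rho}P_{\rho}$ --- and then deduce uniqueness of $\mu_{\rho}$ from Theorem \ref{theorem choquet bis copy(1)}. The paper goes the other way around: in Lemma \ref{choquet unique} it proves uniqueness \emph{directly}, by noting (Stone--Weierstrass) that $\mu_{\rho}$ is determined by its moments $\mu_{\rho}(\hat{A}_{1}\cdots \hat{A}_{n})$ and then computing these explicitly as $\lim_{L\rightarrow\infty}\rho\bigl((A_{1})_{L,\vec{\ell}}\cdots (A_{n})_{L,\vec{\ell}}\bigr)$ by means of the von Neumann ergodic theorem (Theorem \ref{vonN}) together with the one--dimensionality of $P_{\hat{\rho}}$ for extreme $\hat{\rho}$ (Lemma \ref{lemma extremal.ergodic}); the simplex property of $E_{\vec{\ell}}$ is then obtained as a \emph{consequence} of uniqueness, not used as an input. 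Both routes are legitimate, and the paper itself acknowledges yours as an alternative in the remark following the theorem (asymptotic abelianess of $\mathcal{U}^{+}$ plus evenness, citing \cite[Corollary 4.3.11.]{BrattelliRobinsonI}). What the paper's argument buys is an explicit formula for the moments of $\mu_{\rho}$ in terms of space--averages of $\rho$, which is reused elsewhere; what your route costs is that you must carry out the disjointness/lattice argument in full in the fermionic setting --- the abelianness of $P_{\rho}\mathfrak{R}_{\rho}P_{\rho}$ (Lemma \ref{vonN.abelian}) is precisely where evenness (Lemma \ref{coro.even}) and asymptotic abelianess of $\mathcal{U}^{+}$ enter, so your identification of the main technical difficulty is accurate. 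For the isometry, both you and the paper ultimately defer to Israel's argument, and your sketch of why disjointness of distinct ergodic states upgrades the general barycentric inequality to an equality is the right idea.
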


\begin{remark}[Barycenters]
\label{remark barycenter}\mbox{ }\newline
\index{Choquet theorem!Barycenter}%
\index{Barycenters}The integral written in Theorem \ref{theorem choquet}
only means here that $\rho \in E_{%
\vec{\ell}}$ is the (unique) barycenter of the probability measure, i.e.,
the normalized positive Borel regular measure, $\mu _{\rho }\in M_{1}^{+}(E_{%
\vec{\ell}})$ on $E_{\vec{\ell}}$, see Definition \ref{def barycenter} and
Theorem \ref{thm barycenter}.
\end{remark}

\begin{notation}[Extreme states]
\label{notation extreme states}\mbox{ }\newline
\index{States!extreme!notation}Extreme points of $E_{%
\vec{\ell}}$ are written as $\hat{\rho}\in \mathcal{E}_{\vec{\ell}}$ or
sometime $\hat{\omega}\in \mathcal{E}_{\vec{\ell}}$.
\end{notation}

The uniqueness of the probability measure $\mu _{\rho }$ given in Theorem %
\ref{theorem choquet} implies, by Theorem \ref{theorem choquet bis copy(1)},
that $E_{\vec{\ell}}$ is a (Choquet) \emph{simplex}%
\index{Simplex!Choquet} (see Definition \ref{gamm regularisation copy(1)}),
which is in fact a consequence of Lemma \ref{coro.even} together with the
asymptotic abelianess%
\index{Asymptotic abelianess} (\ref{asymtptic abelian}) of the even
sub--algebra $\mathcal{U}^{+}$ (\ref{definition of even operators}), see 
\cite[Corollary 4.3.11.]{BrattelliRobinsonI}. Observe also that the simplex $%
E_{%
\vec{\ell}}$ has a fairly complicated geometrical structure: For any $\vec{%
\ell}\in \mathbb{N}^{d}$, $\mathcal{E}_{\vec{\ell}}$ is a weak$^{\ast }$--%
\emph{dense} $G_{\delta }$ subset in $E_{\vec{\ell}}$, see Corollary \ref%
{lemma density of extremal points}. In fact, up to an affine homeomorphism
the set $E_{\vec{\ell}}$ is the \emph{Poulsen simplex}, see Theorem \ref%
{theorem Bauer copy(1)}:

\begin{theorem}[$E_{\vec{\ell}}$ and the Poulsen simplex]
\label{Thm Poulsen simplex}\mbox{ }\newline
The Choquet simplices $\{E_{\vec{\ell}}\}_{\vec{\ell}\in \mathbb{N}^{d}}$
are all affinely homeomorphic to the Poulsen simplex%
\index{Simplex!Poulsen|textbf}, i.e., $E_{%
\vec{\ell}}$ is unique up to an affine homeomorphism.
\end{theorem}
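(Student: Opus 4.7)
The plan is to identify the three abstract properties characterizing the Poulsen simplex up to affine homeomorphism and verify them one at a time for $E_{\vec{\ell}}$. By the theorem of Lindenstrauss--Olsen--Sternfeld (invoked in the text as Theorem~\ref{theorem Bauer copy(1)}), there is, up to affine homeomorphism, exactly one metrizable Choquet simplex whose extreme boundary is dense in the whole simplex. So the entire task reduces to checking:
\begin{enumerate}
\item[(a)] $E_{\vec{\ell}}$ is a Choquet simplex;
\item[(b)] $E_{\vec{\ell}}$ is metrizable in the weak$^{\ast }$--topology;
\item[(c)] $\mathcal{E}_{\vec{\ell}}$ is weak$^{\ast }$--dense in $E_{\vec{\ell}}$.
\end{enumerate}

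Property (b) is the simplest: because $\mathcal{U}$ is separable (as noted after \eqref{local elements}, since each $\mathcal{U}_{\Lambda }$ is finite dimensional and $\mathcal{U}_{0}$ is dense), Theorem~\ref{Metrizability} says that any weak$^{\ast }$--compact subset of $\mathcal{U}^{\ast }$ is metrizable, hence so is $E_{\vec{\ell}}$.

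Property (a) has already been essentially recorded in the excerpt. The uniqueness of the probability measure $\mu _{\rho }$ in Theorem~\ref{theorem choquet} is equivalent (via Theorem~\ref{theorem choquet bis copy(1)}) to $E_{\vec{\ell}}$ being a Choquet simplex. In turn, this uniqueness comes from Lemma~\ref{coro.even} (all $\vec{\ell}$--periodic states are even, so they are determined by their values on $\mathcal{U}^{+}$) combined with the asymptotic abelianness \eqref{asymtptic abelian} of the even sub--algebra $\mathcal{U}^{+}$, through \cite[Corollary 4.3.11]{BrattelliRobinsonI}. Thus no further work is needed for (a).

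The main obstacle is (c), the density of the extreme (i.e.\ ergodic) $\vec{\ell}$--periodic states in $E_{\vec{\ell}}$. This is exactly the content of Corollary~\ref{lemma density of extremal points}, which is the result one really must invoke: being a $G_{\delta }$ dense subset is the characteristic qualitative feature that tells the Poulsen simplex apart from the (opposite extreme) Bauer simplices. Its proof in the sequel proceeds by perturbing an arbitrary $\rho \in E_{\vec{\ell}}$ by convex combinations of translates whose correlations at large separation are controlled via the space--averaging functional $\Delta _{A,\vec{\ell}}$ introduced in Section~\ref{Section space averaging}; here asymptotic abelianness of $\mathcal{U}^{+}$ again plays the decisive role, ensuring that ergodic states can approximate any given state in the weak$^{\ast }$--topology. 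Once (a)--(c) are established, the Lindenstrauss--Olsen--Sternfeld theorem delivers the affine homeomorphism with the Poulsen simplex, and since all the $E_{\vec{\ell}}$ share these three properties, they are mutually affinely homeomorphic.
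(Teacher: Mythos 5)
Your proposal is correct and follows exactly the route the paper takes: the text reduces the theorem to Theorem \ref{theorem Bauer copy(1)} (Lindenstrauss--Olsen--Sternfeld) by combining metrizability (Theorem \ref{Metrizability}), the simplex property via uniqueness of the ergodic decomposition (Lemma \ref{choquet unique} and Theorem \ref{theorem choquet bis copy(1)}), and the weak$^{\ast }$--density of $\mathcal{E}_{\vec{\ell}}$ (Corollary \ref{lemma density of extremal points}, proved by periodizing the restriction of a given state to a box and space--averaging the result). Your identification of the density of the extreme boundary as the essential, non--routine ingredient matches the paper's emphasis.
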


Note that the simplex $E_{\vec{\ell}}$ can also be seen as a simplexoid%
\index{Simplexoid}, i.e., a compact convex set in which all closed proper
faces\footnote{%
A face $F$ of a convex set $K$ is defined to be a subset of $K$ with the
property that, if $\rho =\lambda _{1}\rho _{1}+\cdots +\lambda _{n}\rho
_{n}\in F$ with $\rho _{1},\ldots ,\rho _{n}\in K$, $\lambda _{1},\ldots
,\lambda _{n}\in (0,1)$ and $\lambda _{1}+\cdots +\lambda _{n}=1$, then $%
\rho _{1},\ldots ,\rho _{n}\in F$.} are simplices. An example of a closed
face of $E_{%
\vec{\ell}}$, for any $\vec{\ell}\in \mathbb{N}^{d}$, is given by the Bauer
simplex%
\index{Simplex!Bauer} $E_{\Pi }\subseteq E_{%
\vec{\ell}}$ of 
\index{States!permutation invariant}permutation invariant states described
in\ Section \ref{set of permutations invariant states}.

\begin{remark}[Gauge invariant t.i. states]
\label{t.i. + gauge inv states}\mbox{ }\newline
An important subset of $E_{1}$ is the convex and weak$^{\ast }$--compact set 
\begin{equation*}
E_{1}^{\circ }:=\{\rho \in E_{1}\;:\;\rho =\rho \circ \sigma ^{\circ
}\}\subseteq E_{1}
\end{equation*}%
of translation and 
\index{Gauge invariant!states}gauge invariant states, cf. Remark \ref%
{proj.gauge.inv}. States describing physical systems generally belong to $%
E_{1}^{\circ }$ which is again the Poulsen simplex%
\index{Simplex!Poulsen} (up to an affine homeomorphism). This can be proven
by identifying $E_{1}^{\circ }$ with the set of all t.i. states on $\mathcal{%
U}^{\circ }$ (\ref{definition of gauge invariant operators}) which is an
asymptotically abelian $C^{\ast }$--algebra.
\end{remark}

The result of Theorem \ref{Thm Poulsen simplex} is standard in statistical
mechanics, in particular for lattice quantum spin systems \cite[p. 405--406,
464]{BrattelliRobinsonI}. It means that the complicated geometrical
structure of the simplices $E_{%
\vec{\ell}}$ is, in a sense, universal and in fact, physically natural.
Indeed, the set $\mathcal{E}_{\vec{\ell}}$ of extreme points of $E_{\vec{\ell%
}}$ can be characterized through a (physically natural) condition related to
space--averaging as follows.

For any $A\in \mathcal{U}$, $L\in \mathbb{N}$ and $\vec{\ell}\in \mathbb{N}%
^{d}$, let $A_{L,\vec{\ell}}\in \mathcal{U}$ be defined by the space--average%
\index{Space--averaging functional!operator|textbf}%
\begin{equation}
A_{L,%
\vec{\ell}}:=\frac{1}{|\Lambda _{L}\cap \mathbb{Z}_{\vec{\ell}}^{d}|}%
\sum\limits_{x\in \Lambda _{L}\cap \mathbb{Z}_{\vec{\ell}}^{d}}\alpha
_{x}(A).  \label{definition de A L}
\end{equation}%
By definition, $A_{L}:=A_{L,\vec{\ell}}$ for $\vec{\ell}=(1,\cdots ,1)$.
This sequence $\{A_{L,\vec{\ell}}\}_{L\in \mathbb{N}}$ of operators in $%
\mathcal{U}$ defines space--averaging functionals:{}

\begin{definition}[Space--averaging functionals]
\label{definition de deltabis}\mbox{ }\newline
\index{Space--averaging functional|textbf}For any $A\in \mathcal{U}$ and $%
\vec{\ell}\in \mathbb{N}^{d}$, the ($\vec{\ell}$--) space--averaging
functional is the map%
\begin{equation*}
\rho \mapsto \Delta _{A,\vec{\ell}}\left( \rho \right)
:=\lim\limits_{L\rightarrow \infty }\rho (A_{L,\vec{\ell}}^{\ast }A_{L,\vec{%
\ell}})
\end{equation*}%
from $E_{\vec{\ell}}$ to $\mathbb{R}$. Here, $\Delta _{A}:=\Delta
_{A,(1,\cdots ,1)}$.
\end{definition}

\noindent The functional $\Delta _{A,\vec{\ell}}$ is well--defined, for all $%
A\in \mathcal{U}$ and $\vec{\ell}\in \mathbb{N}^{d}$, and we give in Section %
\ref{Section space averaging}\ a complete description of $\Delta _{A}$. This
map is pivotal as it is used to define \emph{ergodic }states in the
following way:

\begin{definition}[Ergodic states]
\label{def:egodic}\mbox{ }\newline
\index{States!ergodic}A $%
\vec{\ell}$--periodic state $\hat{\rho}\in E_{\vec{\ell}}$ is ($\vec{\ell}$%
--) ergodic iff, for all $A\in \mathcal{U}$, 
\begin{equation*}
\Delta _{A,\vec{\ell}}\left( \hat{\rho}\right) =|\hat{\rho}(A)|^{2}.
\end{equation*}
\end{definition}

\noindent The equality in this definition says that space fluctuations of
measures on a system described by a $\mathbb{Z}_{\vec{\ell}}^{d}$--invariant
state $\hat{\rho}$ are small when it is ergodic: For any observable $A$, we
are able to determine $\hat{\rho}(A)$ through space--averaging over the
sub--lattice $\Lambda _{L}\cap \mathbb{Z}_{\vec{\ell}}^{d}$ at large $L$. We
can view this result as a non--commutative version of the law of large
numbers. Note that the term \textquotedblleft ergodic\textquotedblright\
comes from the fact we can replace a space average by the correponding
expectation value for these special states. The latter also holds for
polynomials of the space averages $A_{L,\vec{\ell}}$, see (\ref{ergodicity
eq}). Observe however that the linear case is trivial by periodicity of the
states.

The unique decomposition expressed in Theorem \ref{theorem choquet} of any $%
\rho \in E_{\vec{\ell}}$ in terms of extreme states $\hat{\rho}\in \mathcal{E%
}_{\vec{\ell}}$ of $E_{\vec{\ell}}$ is also called the \emph{ergodic
decomposition}. Indeed, we prove in Section \ref{Section theorem ergodic
extremalbis} that any ergodic state is an extreme state in $E_{\vec{\ell}}$
and vice versa, see Lemmata \ref{ergodic.extremal} and \ref{lemma
extremal.ergodic} together with Corollary \ref{Corollary Extremal states -
strongly clustering}.

\begin{theorem}[Extremality = Ergodicity]
\label{theorem ergodic extremal}\mbox{ }\newline
\index{States!extreme}%
\index{States!ergodic}Any extreme state $%
\hat{\rho}\in \mathcal{E}_{\vec{\ell}}$ of $E_{\vec{\ell}}$ is ergodic and
vice versa. Additionally, any extreme state $\hat{\rho}\in \mathcal{E}_{\vec{%
\ell}}$ is strongly clustering%
\index{States!strongly clustering}, i.e., for all $A,B\in \mathcal{U}$,%
\begin{equation*}
\lim\limits_{L\rightarrow \infty }%
\frac{1}{|\Lambda _{L}\cap \mathbb{Z}_{\vec{\ell}}^{d}|}\sum\limits_{y\in
\Lambda _{L}\cap \mathbb{Z}_{\vec{\ell}}^{d}}\hat{\rho}\left( \alpha
_{x}(A)\alpha _{y}(B)\right) =\hat{\rho}(A)\hat{\rho}(B)
\end{equation*}%
uniformly in $x\in \mathbb{Z}_{\vec{\ell}}^{d}$.
\end{theorem}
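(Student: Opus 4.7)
The plan is to pass to the GNS representation. Let $(\pi_\rho, \mathcal{H}_\rho, \Omega_\rho)$ be the GNS triple of a fixed $\rho \in E_{\vec{\ell}}$. Since $\rho$ is $\mathbb{Z}_{\vec{\ell}}^d$--invariant, the translations $\alpha_x$, $x \in \mathbb{Z}_{\vec{\ell}}^d$, are implemented by unitaries $U_x$ on $\mathcal{H}_\rho$ fixing $\Omega_\rho$. Because the cubes $\Lambda_L \cap \mathbb{Z}_{\vec{\ell}}^d$ form a F\o{}lner sequence in $\mathbb{Z}_{\vec{\ell}}^d$, the von Neumann mean ergodic theorem yields strong convergence
\begin{equation*}
\frac{1}{|\Lambda_L \cap \mathbb{Z}_{\vec{\ell}}^d|}\sum_{x \in \Lambda_L \cap \mathbb{Z}_{\vec{\ell}}^d} U_x \xrightarrow{L \to \infty} P,
\end{equation*}
where $P$ is the orthogonal projection onto the $U$--invariant subspace of $\mathcal{H}_\rho$. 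Applied to $\pi_\rho(A)\Omega_\rho$, this gives $\pi_\rho(A_{L,\vec{\ell}})\Omega_\rho \to P\pi_\rho(A)\Omega_\rho$, so $\Delta_{A,\vec{\ell}}(\rho) = \|P\pi_\rho(A)\Omega_\rho\|^2$.

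Since $\rho(A) = \langle \Omega_\rho, \pi_\rho(A)\Omega_\rho\rangle = \langle \Omega_\rho, P\pi_\rho(A)\Omega_\rho\rangle$, the Cauchy--Schwarz inequality gives $|\rho(A)|^2 \leq \|P\pi_\rho(A)\Omega_\rho\|^2$, with equality iff $P\pi_\rho(A)\Omega_\rho \in \mathbb{C}\,\Omega_\rho$. Hence $\rho$ is ergodic iff $P\pi_\rho(\mathcal{U})\Omega_\rho \subseteq \mathbb{C}\,\Omega_\rho$; by cyclicity of $\Omega_\rho$, this is in turn equivalent to $P$ being the rank--one projection onto $\mathbb{C}\,\Omega_\rho$. (For odd $A \in \mathcal{U}$ the condition $\Delta_{A,\vec{\ell}}(\rho) = 0 = |\rho(A)|^2$ is forced by Lemma \ref{coro.even}, so odd and even elements can be treated uniformly here.)

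To link rank one of $P$ to extremality of $\rho$, the essential structural input is again Lemma \ref{coro.even}: $\rho$ is even, and the even sub--algebra $\mathcal{U}^+$ is asymptotically abelian under the lattice translations. A standard computation then shows $P \in \pi_\rho(\mathcal{U}^+)'$. Arguing as in \cite[Theorem 4.3.17]{BrattelliRobinsonI}, non--trivial orthogonal sub--projections $Q \leq P$ with $Q\Omega_\rho \neq 0$ correspond bijectively to non--trivial convex decompositions $\rho = \lambda \rho_1 + (1-\lambda)\rho_2$ in $E_{\vec{\ell}}$ via
\begin{equation*}
\rho_Q(A) := \frac{\langle Q\Omega_\rho, \pi_\rho(A)\,Q\Omega_\rho\rangle}{\|Q\Omega_\rho\|^2},
\end{equation*}
first defined on $\mathcal{U}^+$ and then extended to $\mathcal{U}$ by evenness. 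Consequently $P$ has rank one iff $\rho$ is extreme in $E_{\vec{\ell}}$, which combined with the previous paragraph gives Extremality $\Leftrightarrow$ Ergodicity.

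Finally, for strong clustering, assume $\hat{\rho}$ is extreme, so $P = |\Omega_{\hat{\rho}}\rangle\langle \Omega_{\hat{\rho}}|$. For $A, B \in \mathcal{U}$ and $x \in \mathbb{Z}_{\vec{\ell}}^d$,
\begin{equation*}
\frac{1}{|\Lambda_L \cap \mathbb{Z}_{\vec{\ell}}^d|}\sum_{y \in \Lambda_L \cap \mathbb{Z}_{\vec{\ell}}^d} \hat{\rho}\bigl(\alpha_x(A)\alpha_y(B)\bigr) = \bigl\langle \pi_{\hat{\rho}}(\alpha_x(A^*))\Omega_{\hat{\rho}},\ \pi_{\hat{\rho}}(B_{L,\vec{\ell}})\Omega_{\hat{\rho}}\bigr\rangle.
\end{equation*}
By the ergodic theorem the right--hand vector converges in norm to $\hat{\rho}(B)\,\Omega_{\hat{\rho}}$, so the expression tends to $\overline{\hat{\rho}(\alpha_x(A^*))}\,\hat{\rho}(B) = \hat{\rho}(A)\hat{\rho}(B)$ by translation invariance. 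The bound $\|\pi_{\hat{\rho}}(\alpha_x(A^*))\Omega_{\hat{\rho}}\| \leq \|A\|$ is uniform in $x$, which upgrades pointwise convergence to uniform convergence via Cauchy--Schwarz. The main technical obstacle throughout is that $\mathcal{U}$ itself is not asymptotically abelian because of the CAR; the argument works only because Lemma \ref{coro.even} lets one reduce every step to the asymptotically abelian even sub--algebra $\mathcal{U}^+$, and this reduction must be carried out carefully each time odd elements appear.
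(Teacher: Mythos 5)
Your overall strategy (GNS representation, von Neumann ergodic theorem, evenness of periodic states, asymptotic abelianess of $\mathcal{U}^{+}$) is the same as the paper's, and your treatment of the equivalence ``ergodic $\Leftrightarrow$ $P$ has rank one'' and of the strong clustering statement is correct. The gap is in the hard direction, extremality $\Rightarrow$ ergodicity, i.e., in linking rank one of $P$ to extremality. First, the claim that ``a standard computation shows $P\in \pi_{\rho}(\mathcal{U}^{+})^{\prime}$'' is false: $P$ commutes with the unitaries $U_{x}$ but not with $\pi_{\rho}(A)$ for even $A$; if it did, then $P\pi_{\rho}(A)\Omega_{\rho}=\pi_{\rho}(A)P\Omega_{\rho}=\pi_{\rho}(A)\Omega_{\rho}$ would make every vector $\pi_{\rho}(A)\Omega_{\rho}$, $A\in\mathcal{U}^{+}$, translation invariant, which fails in general. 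What asymptotic abelianess actually yields -- and what the paper proves in Lemma \ref{vonN.abelian} -- is the weaker statement that the \emph{compressed} algebra $P_{\rho}\mathfrak{R}_{\rho}P_{\rho}$ is abelian, where $\mathfrak{R}_{\rho}:=[\pi_{\rho}(\mathcal{U})\cup\{U_{x}\}]^{\prime\prime}$. Second, and consequently, your claimed bijection between sub--projections $Q\leq P$ and convex decompositions via $\rho_{Q}(A):=\|Q\Omega_{\rho}\|^{-2}\langle Q\Omega_{\rho},\pi_{\rho}(A)Q\Omega_{\rho}\rangle$ does not hold as stated: writing $\Omega_{\rho}=Q\Omega_{\rho}+(P-Q)\Omega_{\rho}$, the identity $\rho=\|Q\Omega_{\rho}\|^{2}\rho_{Q}+\|(P-Q)\Omega_{\rho}\|^{2}\rho_{P-Q}$ fails because of the cross terms $\langle Q\Omega_{\rho},\pi_{\rho}(A)(P-Q)\Omega_{\rho}\rangle$, which vanish only when $Q$ commutes with $P\pi_{\rho}(A)P$, i.e., only for $Q\in(P\mathfrak{R}_{\rho}P)^{\prime}$, not for arbitrary $Q\leq P$.

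The repair is exactly the content of the paper's Lemmata \ref{vonN.abelian} and \ref{lemma extremal.ergodic}: since $P\mathfrak{R}_{\rho}P$ is abelian and $\Omega_{\rho}$ is cyclic for it on $P\mathcal{H}_{\rho}$, one has $P\mathfrak{R}_{\rho}P\subseteq (P\mathfrak{R}_{\rho}P)^{\prime}=P\mathfrak{R}_{\rho}^{\prime}P$; for an extreme state one first shows $\mathfrak{R}_{\hat{\rho}}^{\prime}=\mathbb{C}\,\mathbf{1}$ (a non--trivial projection in the commutant of both $\pi_{\hat{\rho}}(\mathcal{U})$ and the $U_{x}$ \emph{does} produce a non--trivial invariant convex decomposition, with no cross terms, precisely because it commutes with every $\pi_{\hat{\rho}}(A)$), whence $P\mathfrak{R}_{\hat{\rho}}P=\mathbb{C}\,P$ and $P\mathcal{H}_{\hat{\rho}}=\mathbb{C}\,\Omega_{\hat{\rho}}$ by cyclicity. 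Note also that for the easy direction (ergodic $\Rightarrow$ extreme) the paper's Lemma \ref{ergodic.extremal} gives a two--line strict--convexity argument, $|\rho(A)|^{2}<\tfrac{1}{2}|\rho_{1}(A)|^{2}+\tfrac{1}{2}|\rho_{2}(A)|^{2}\leq\Delta_{A,\vec{\ell}}(\rho)$, which bypasses the rank--one characterization and the commutant machinery entirely.
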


Observe that a strongly clustering state $\rho \in E_{\vec{\ell}}$ is not
necessarily \emph{strongly mixing}%
\index{States!strongly mixing} which means that%
\begin{equation}
\lim\limits_{|x|\rightarrow \infty }\rho \left( A\alpha _{x}(B)\right) =\rho
(A)\rho (B)  \label{mixing}
\end{equation}%
for all $A,B\in \mathcal{U}$. The converse is trivial: Any strongly mixing
state satisfies the ergodicity property.

\begin{remark}[Gauge invariant states and ergodicity]
\label{t.i. + gauge inv states ergodicity}\mbox{ }\newline
\index{Gauge invariant!ergodicity}From Remark \ref{t.i. + gauge inv states},
a state $%
\hat{\rho}\in E_{1}^{\circ }$ is extreme in $E_{1}^{\circ }$ iff $\hat{\rho}%
\in E_{1}^{\circ }$ is ergodic w.r.t. the sub--algebra $\mathcal{U}^{\circ
}\subseteq \mathcal{U}$, that is, for all $A\in \mathcal{U}^{\circ }$,%
\begin{equation*}
\lim\limits_{L\rightarrow \infty }\frac{1}{|\Lambda _{L}|^{2}}%
\sum\limits_{x,y\in \Lambda _{L}}\hat{\rho}(\alpha _{x}(A^{\ast })\alpha
_{y}(A))=|\hat{\rho}(A)|^{2}.
\end{equation*}%
Compare with Definition \ref{def:egodic}.
\end{remark}

\section{The space--averaging functional $\Delta _{A}$\label{Section space
averaging}%
\index{Space--averaging functional|textbf}}

The set of translation invariant (t.i.) states $E_{1}:=E_{(1,\cdots ,1)}$
and the space--averaging functional $\Delta _{A}:=\Delta _{A,(1,\cdots ,1)}$
play a central r{o}le below as we concentrate our attention on the
thermodynamics of translation invariant (t.i.) Fermi systems. However, our
analysis can easily be generalized to the ($%
\vec{\ell}$--) space--averaging functional $\Delta _{A,\vec{\ell}}$ for any $%
\vec{\ell}\in \mathbb{N}^{d}$, see Definition \ref{definition de deltabis}.

First, by Lemma \ref{Lemma1.vonN copy(2)}, the space--averaging functional $%
\Delta _{A}$ is well--defined for all $\vec{\ell}$--periodic states $\rho
\in E_{\vec{\ell}}$ at any $\vec{\ell}\in \mathbb{N}^{d}$. In this case,%
\begin{equation}
\rho \mapsto \Delta _{A}\left( \rho \right) :=\lim\limits_{L\rightarrow
\infty }\rho \left( A_{L}^{\ast }A_{L}\right) \in \left[ |\rho (A_{\vec{\ell}%
})|^{2},\Vert A\Vert ^{2}\right] ,  \label{delta bounded}
\end{equation}%
with 
\index{Space--averaging functional!operator|textbf}%
\begin{equation}
A_{%
\vec{\ell}}:=\frac{1}{\ell _{1}\cdots \ell _{d}}\sum\limits_{x=(x_{1},\cdots
,x_{d}),\;x_{i}\in \{0,\cdots ,\ell _{i}-1\}}\alpha _{x}(A)
\label{definition de A l}
\end{equation}%
for any $\vec{\ell}\in \mathbb{N}^{d}$.

As explained in the previous section, extremality of t.i. states can be
characterized by means of the space--averaging functional $\Delta _{A}$.
Indeed, the set of t.i. states $\rho \in E_{1}$ which fulfill $\Delta
_{A}\left( \rho \right) =|\rho (A)|^{2}$ for any $A\in \mathcal{U}$, i.e.,
the set of ergodic states (Definition \ref{def:egodic}), is the set $%
\mathcal{E}_{1}$ of extreme states of $E_{1}$, see Theorem \ref{theorem
ergodic extremal}. Nevertheless, this functional has never gained much
attention before beyond the fact that it can be used to characterize
extremality of states. It turns out that other properties of the
space--averaging functional are also crucial in the analysis of
thermodynamic effects of long--range interactions. Its basic properties --
proven in Lemmata \ref{Lemma1.vonN copy(3)} and \ref{Lemma1.vonN copy(5)} --
are listed in the following theorem:

\begin{theorem}[Properties of the functional $\Delta _{A}$ on $E_{\vec{\ell}%
} $]
\label{Lemma1.vonN copy(4)}\mbox{ }\newline
\emph{(i) }At fixed $A\in \mathcal{U}$, the map $\rho \mapsto \Delta
_{A}(\rho )$ from $E_{\vec{\ell}}$ to $\mathbb{R}_{0}^{+}$ is a weak$^{\ast
} $--upper semi--continuous affine functional. It is also t.i., i.e., for
all $x\in \mathbb{Z}^{d}$ and $\rho \in E_{\vec{\ell}}$, $\Delta _{A}(\rho
\circ \alpha _{x})=\Delta _{A}(\rho )$.\newline
\emph{(ii)} At fixed $\rho \in E_{\vec{\ell}}$ and for all $A,B\in \mathcal{U%
}$, 
\begin{equation*}
|\Delta _{A}\left( \rho \right) -\Delta _{B}\left( \rho \right) |\leq (\Vert
A\Vert +\Vert B\Vert )\Vert A-B\Vert .
\end{equation*}%
In particular, the map $A\mapsto \Delta _{A}\left( \rho \right) $ from $%
\mathcal{U}$ to $\mathbb{R}_{0}^{+}$ is locally Lipschitz continuous.
\end{theorem}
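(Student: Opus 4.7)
The plan is to treat (i) and (ii) separately, with upper semi-continuity in (i) being the main technical point; the other assertions follow from routine estimates on space averages.

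Affinity in (i) is immediate, since for each $L$ the evaluation $\rho \mapsto \rho(A_L^\ast A_L)$ is linear in $\rho$, and affinity passes to pointwise limits. For translation invariance, fix $x\in \mathbb{Z}^d$ and observe that $A_L$ and $\alpha_x(A_L)$ are normalized sums of $\alpha_y(A)$ over $\Lambda_L$ and $x+\Lambda_L$ respectively, so
\[
\|A_L - \alpha_x(A_L)\| \;\leq\; \frac{2\,|\Lambda_L\, \triangle\, (x+\Lambda_L)|}{|\Lambda_L|}\,\|A\| \;=\; O(|x|/L)\,\|A\|,
\]
which goes to zero. Hence $|\rho(A_L^\ast A_L) - \rho(\alpha_x(A_L)^\ast \alpha_x(A_L))| \to 0$ and $\Delta_A(\rho \circ \alpha_x) = \Delta_A(\rho)$.

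For upper semi-continuity I would establish the representation
\[
\Delta_A(\rho) \;=\; \inf_{L\geq 1}\,\rho(A_L^\ast A_L) \qquad (\rho \in E_{\vec{\ell}}),
\]
which exhibits $\Delta_A$ as an infimum of the weak$^\ast$-continuous evaluation functionals $\rho \mapsto \rho(A_L^\ast A_L)$ and is therefore weak$^\ast$-upper semi-continuous. The nontrivial step is the pointwise bound $\rho(A_L^\ast A_L) \geq \Delta_A(\rho)$. I would use the ergodic decomposition $\rho = \int \hat\rho\, d\mu_\rho(\hat\rho)$ from Theorem \ref{theorem choquet}; by the affinity just proved, $\Delta_A(\rho) = \int \Delta_A(\hat\rho)\, d\mu_\rho(\hat\rho)$. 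For each ergodic $\hat\rho \in \mathcal{E}_{\vec{\ell}}$, Definition \ref{def:egodic} together with the elementary relation between $A_L$ and $A_{\vec{\ell}}$ gives $\Delta_A(\hat\rho) = |\hat\rho(A_{\vec{\ell}})|^2$, and Cauchy--Schwarz for the state $\hat\rho$ yields $\hat\rho(A_L^\ast A_L) \geq |\hat\rho(A_L)|^2$. For $\vec{\ell} = (1,\ldots,1)$, one has $\hat\rho(A_L) = \hat\rho(A)$ for every $L$ by translation invariance of $\hat\rho$, and the desired bound is immediate; for general $\vec{\ell}$ one approximates $A_L$ in norm by the average of the $\alpha_y(A)_{L,\vec{\ell}}$ over a set of representatives $y$ of $\mathbb{Z}^d/\mathbb{Z}^d_{\vec{\ell}}$, shows that the boundary error vanishes in norm, and runs the same Cauchy--Schwarz argument against the $\mathbb{Z}^d_{\vec{\ell}}$-invariance of $\hat\rho$. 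Integrating against $\mu_\rho$ produces the pointwise bound.

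Part (ii) is a direct computation. From the algebraic identity $A_L^\ast A_L - B_L^\ast B_L = A_L^\ast(A_L - B_L) + (A_L - B_L)^\ast B_L$, the Cauchy--Schwarz inequality $|\rho(X^\ast Y)| \leq \|X\|\,\|Y\|$ for states, and the triangle-type bounds $\|A_L\| \leq \|A\|$, $\|B_L\| \leq \|B\|$, $\|A_L - B_L\| \leq \|A-B\|$, one obtains
\[
|\rho(A_L^\ast A_L) - \rho(B_L^\ast B_L)| \;\leq\; \|A-B\|\,(\|A\| + \|B\|)
\]
uniformly in $L$, and letting $L \to \infty$ gives the stated estimate; local Lipschitz continuity is then immediate. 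The real obstacle lies in the upper semi-continuity step, specifically the pointwise bound $\rho(A_L^\ast A_L) \geq \Delta_A(\rho)$: clean for t.i.\ $\rho$ via ergodic decomposition, but requiring the boundary-approximation described above for general $\vec{\ell}$-periodic $\rho$.
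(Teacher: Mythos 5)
Your proof is correct in substance, but the key step -- upper semi--continuity -- is obtained by a genuinely different route from the paper's. The paper (Lemmata \ref{Lemma1.vonN copy(2)} and \ref{Lemma1.vonN copy(3)}) works directly in the GNS representation: by the von Neumann ergodic theorem (Theorem \ref{vonN}) one has $P_{\rho }P_{\rho }^{(L)}=P_{\rho }$, whence $\rho (A_{L}^{\ast }A_{L})=\Vert P_{\rho }^{(L)}\pi _{\rho }(A_{\vec{\ell}})\Omega _{\rho }\Vert ^{2}\geq \Vert P_{\rho }\pi _{\rho }(A_{\vec{\ell}})\Omega _{\rho }\Vert ^{2}=\Delta _{A}(\rho )$ along the sub--lattice of scales, giving the infimum representation in one line for \emph{every} $\vec{\ell}$--periodic state, with no appeal to extremal decomposition. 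You instead reduce the pointwise lower bound to ergodic states via Theorem \ref{theorem choquet} and Theorem \ref{theorem ergodic extremal} and then use Cauchy--Schwarz. This is legitimate and there is no circularity (those results are established in the paper independently of the present theorem), but it invokes strictly heavier machinery -- the Choquet decomposition and the equivalence extremality $=$ ergodicity both rest on the same von Neumann ergodic theorem that the paper uses directly -- so what the paper's route buys is economy and self--containedness, while yours buys a conceptually transparent reduction to the ergodic case. Your treatments of affinity, translation invariance and of part (ii) coincide with the paper's up to notation.

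Two points in your argument need patching. First, affinity alone does not justify the integral identity $\Delta _{A}(\rho )=\int_{\mathcal{E}_{\vec{\ell}}}\Delta _{A}(\hat{\rho})\,\mathrm{d}\mu _{\rho }(\hat{\rho})$ for a general probability measure; you should either invoke Lemma \ref{Corollary 4.1.18.} (which presupposes upper semi--continuity, hence would be circular here) or, better, note that $\Delta _{A}$ is the pointwise limit of the functionals $\rho \mapsto \rho (A_{L}^{\ast }A_{L})$, uniformly bounded by $\Vert A\Vert ^{2}$, and apply dominated convergence. Second, for general $\vec{\ell}$ the boundary approximation of $A_{L}$ by sub--lattice averages only yields $\rho (A_{L}^{\ast }A_{L})\geq \Delta _{A}(\rho )-\mathcal{O}(L^{-1})$, so the exact representation $\Delta _{A}(\rho )=\inf_{L\geq 1}\rho (A_{L}^{\ast }A_{L})$ does not follow from your argument; upper semi--continuity survives because one may write $\Delta _{A}$ as the infimum of the weak$^{\ast }$--continuous family $\rho \mapsto \rho (A_{L}^{\ast }A_{L})+C\Vert A\Vert ^{2}L^{-1}$, but the statement of the infimum representation should be adjusted accordingly (or restricted, as in Lemma \ref{Lemma1.vonN copy(2)}, to the scales $(L,\cdots ,L)\in \vec{\ell}.\mathbb{N}^{d}$).
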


The affinity and the translation invariance of $\Delta _{A}$, as well as
(ii), are immediate consequences of its definition (see Lemmata \ref%
{Lemma1.vonN copy(3)} and \ref{Lemma1.vonN copy(5)}). Its weak$^{\ast }$%
--upper semi--continuity follows from the fact that $\Delta _{A}$ is the
infimum of a family of weak$^{\ast }$--continuous functionals (see Lemmata %
\ref{Lemma1.vonN copy(2)} and \ref{Lemma1.vonN copy(3)}).

Note that $\Delta _{A}$ is not weak$^{\ast }$--continuous for all $A\in 
\mathcal{U}$, even on the set $E_{1}$. Indeed, if $\Delta _{A}$ is weak$%
^{\ast }$--continuous on $E_{1}$ then $\Delta _{A}\left( \rho \right) =|\rho
(A)|^{2}$ for all $\rho \in E_{1}$ because of Theorem \ref{theorem ergodic
extremal} and the weak$^{\ast }$--density of the set $\mathcal{E}_{1}$ in $%
E_{1}$ (Corollary \ref{lemma density of extremal points}). Therefore, there
exists $A\in \mathcal{U}$ such that $\Delta _{A}$ is not weak$^{\ast }$%
--continuous. Otherwise, any state $\rho \in E_{1}$ would be ergodic and
hence, an extreme point of $E_{1}$ by Theorem \ref{theorem ergodic extremal}%
. A more detailed study on the weak$^{\ast }$--continuity of the
space--averaging functional $\Delta _{A}$ on the set $E_{1}$ of t.i. states
is given by the following theorem:

\begin{theorem}[Properties of the map $\protect\rho \mapsto \Delta
_{A}\left( \protect\rho \right) $ on $E_{1}$ at fixed $A\in \mathcal{U}$]
\label{Lemma1.vonN}\mbox{ }\newline
\emph{(i) }$\Delta _{A}$ is weak$^{\ast }$--continuous on $E_{1}$ iff the
affine map $\rho \mapsto |\rho (A)|$ from $E_{1}$ to $\mathbb{C}$ is a
constant map.\newline
\emph{(ii) }$\Delta _{A}$ is weak$^{\ast }$--discontinuous on a weak$^{\ast
} $--dense subset of $E_{1}$ unless $\rho \mapsto |\rho (A)|$ is a constant
map from $E_{1}$ to $\mathbb{C}$.\newline
\emph{(iii)} $\Delta _{A}$ is continuous on the $G_{\delta }$ weak$^{\ast }$%
--dense subset $\mathcal{E}_{1}$ of extreme states of $E_{1}$. In
particular, the set of all states of $E_{1}$ where $\Delta _{A}$ is weak$%
^{\ast }$--discontinuous is weak$^{\ast }$--meager.\newline
\emph{(iv)} $\Delta _{A}$ can be decomposed in terms of an integral on the
set $\mathcal{E}_{1}$, i.e., for all $\rho \in E_{1}$, 
\begin{equation*}
\Delta _{A}\left( \rho \right) =\int_{\mathcal{E}_{1}}\mathrm{d}\mu _{\rho
}\left( \hat{\rho}\right) \;\left\vert \hat{\rho}\left( A\right) \right\vert
^{2}
\end{equation*}%
with the probability measure $\mu _{\rho }$ defined by Theorem \ref{theorem
choquet}. \newline
\emph{(v)} Its $\Gamma $--regularization%
\index{Gamma--regularization!space--averaging functional} $\Gamma
_{E_{1}}\left( \Delta _{A}\right) $ on $E_{1}$\ is the weak$^{\ast }$%
--continuous convex map $\rho \mapsto \left\vert \rho \left( A\right)
\right\vert ^{2}$.
\end{theorem}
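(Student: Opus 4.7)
The strategy rests on three ingredients established earlier in the chapter: the affinity and weak$^{\ast}$--upper semicontinuity of $\Delta_{A}$ on $E_{\vec{\ell}}$ (Theorem \ref{Lemma1.vonN copy(4)}), the identity $\Delta_{A}(\hat{\rho})=|\hat{\rho}(A)|^{2}$ for extreme states $\hat{\rho}\in\mathcal{E}_{1}$ (Theorem \ref{theorem ergodic extremal}), and the weak$^{\ast}$--density of $\mathcal{E}_{1}$ as a $G_{\delta}$ subset of $E_{1}$, together with the Choquet decomposition of any $\rho\in E_{1}$ (Theorem \ref{theorem choquet}). I would establish (iv) first, since (i)--(iii) and (v) all follow rather cleanly from it.

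For (iv), my plan is to invoke the barycentric representation for bounded affine weak$^{\ast}$--upper semicontinuous functions on the Choquet simplex $E_{1}$. Since $\Delta_{A}$ is the infimum of a family of weak$^{\ast}$--continuous affine functionals (by Theorem \ref{Lemma1.vonN copy(4)}), each of which obeys the elementary barycentric formula against $\mu_{\rho}$, one passes to the infimum via monotone-type arguments to obtain
\[
\Delta_{A}(\rho)=\int_{\mathcal{E}_{1}}\Delta_{A}(\hat{\rho})\,d\mu_{\rho}(\hat{\rho})=\int_{\mathcal{E}_{1}}|\hat{\rho}(A)|^{2}\,d\mu_{\rho}(\hat{\rho}),
\]
the second equality using Theorem \ref{theorem ergodic extremal}. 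Jensen's inequality applied to $z\mapsto|z|^{2}$, combined with the vector-valued barycentric identity $\rho(A)=\int\hat{\rho}(A)\,d\mu_{\rho}$, then produces the fundamental pointwise bound $|\rho(A)|^{2}\le\Delta_{A}(\rho)$ on all of $E_{1}$, which I will use repeatedly.

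With (iv) in hand, (iii) and (v) come quickly. For (iii): fix $\hat{\rho}_{0}\in\mathcal{E}_{1}$ and $\rho_{n}\to\hat{\rho}_{0}$; upper semicontinuity gives $\limsup_{n}\Delta_{A}(\rho_{n})\le|\hat{\rho}_{0}(A)|^{2}$, while the pointwise bound together with weak$^{\ast}$--continuity of $\rho\mapsto|\rho(A)|^{2}$ gives $\liminf_{n}\Delta_{A}(\rho_{n})\ge|\hat{\rho}_{0}(A)|^{2}$, proving continuity at every extreme state. Since the continuity set of any function is $G_{\delta}$ and here contains the dense $G_{\delta}$ set $\mathcal{E}_{1}$, the discontinuity set sits inside the meager $F_{\sigma}$ complement $E_{1}\setminus\mathcal{E}_{1}$. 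For (v): $|\rho(A)|^{2}$ is convex, weak$^{\ast}$--continuous, and a minorant of $\Delta_{A}$, whence $|\rho(A)|^{2}\le\Gamma_{E_{1}}(\Delta_{A})$; conversely, if $g$ is any convex weak$^{\ast}$--lower semicontinuous minorant of $\Delta_{A}$ and $\hat{\rho}_{n}\in\mathcal{E}_{1}$ with $\hat{\rho}_{n}\to\rho_{0}$, lower semicontinuity gives
\[
g(\rho_{0})\le\liminf_{n}g(\hat{\rho}_{n})\le\liminf_{n}|\hat{\rho}_{n}(A)|^{2}=|\rho_{0}(A)|^{2},
\]
forcing $\Gamma_{E_{1}}(\Delta_{A})=|\cdot(A)|^{2}$.

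Parts (i) and (ii) reduce to a single affinity-forcing argument. The implication $(\Leftarrow)$ in (i) is immediate from (iv): if $|\rho(A)|\equiv c$, then $\Delta_{A}\equiv c^{2}$ is constant. For $(\Rightarrow)$ in (i), and equivalently for (ii) by contrapositive, assume $\Delta_{A}$ is weak$^{\ast}$--continuous on some weak$^{\ast}$--open set $U\subseteq E_{1}$. Density of $\mathcal{E}_{1}$ together with $\Delta_{A}=|\cdot(A)|^{2}$ on $\mathcal{E}_{1}$ then forces $\Delta_{A}\equiv|\cdot(A)|^{2}$ on $U$; since $\Delta_{A}$ is affine, $|\cdot(A)|^{2}$ must be affine on any segment lying in $U$, and the elementary identity
\[
|\lambda z_{1}+(1-\lambda)z_{2}|^{2}-\lambda|z_{1}|^{2}-(1-\lambda)|z_{2}|^{2}=-\lambda(1-\lambda)|z_{1}-z_{2}|^{2}
\]
with $z_{i}=\rho_{i}(A)$ forces $\rho_{1}(A)=\rho_{2}(A)$ on every such segment. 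Affinity of the linear functional $\rho\mapsto\rho(A)$ then extends this globally: for any $\rho_{0}\in U$ and $\rho'\in E_{1}$, the segment $(1-t)\rho_{0}+t\rho'$ lies in $U$ for small $t>0$, forcing $\rho'(A)=\rho_{0}(A)$. Hence $\rho\mapsto|\rho(A)|$ is constant on $E_{1}$, settling (i) and, by contraposition, (ii). The main subtlety I anticipate is the rigorous derivation of the barycentric formula in (iv) for the merely upper semicontinuous $\Delta_{A}$; once this is secured, the rest of the argument is routine given the Poulsen--Choquet structure of $E_{1}$.
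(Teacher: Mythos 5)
Your proposal is correct and follows essentially the same route as the paper: part (iv) via the barycentric formula for the affine weak$^{\ast}$--upper semi--continuous functional $\Delta_{A}$ (the paper's Lemma \ref{Corollary 4.1.18.}) combined with the identity $\Delta_{A}(\hat{\rho})=|\hat{\rho}(A)|^{2}$ on $\mathcal{E}_{1}$, then (iii) by the upper--semicontinuity/continuous--minorant sandwich at extreme states and (v) by weak$^{\ast}$--density of $\mathcal{E}_{1}$ together with the largest lower semi--continuous convex minorant characterization. The only organizational difference is in (i)--(ii): the paper exhibits an explicit weak$^{\ast}$--dense set of non--trivial convex combinations at which the strict Jensen gap forces discontinuity, whereas you run the equivalent contrapositive, showing that continuity on a weak$^{\ast}$--open set forces $\rho\mapsto\rho(A)$ to be constant; both arguments rest on the same strict--convexity identity for $|z|^{2}$ played against the affinity of $\Delta_{A}$.
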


\noindent Recall that the $\Gamma $--regularization of functionals is
defined in Definition \ref{gamm regularisation}. For more details, we
recommend Section \ref{Section gamma regularization} as well as Corollary %
\ref{Biconjugate} in\ Section \ref{Section Legendre-Fenchel transform}.

The continuity properties (i)--(iii) result partially from Theorems \ref%
{theorem choquet} and \ref{theorem ergodic extremal}, for more details see
Proposition \ref{Lemma1.vonN copy(1)}. The assertion (iv) is a direct
consequence of Theorem \ref{Lemma1.vonN copy(4)} (i) and Lemma \ref%
{Corollary 4.1.18.} combined with Theorems \ref{theorem choquet} and \ref%
{theorem ergodic extremal}. The last statement (v) is deduced from the
density of the set $\mathcal{E}_{1}$ in $E_{1}$ (Corollary \ref{lemma
density of extremal points}) together with Theorems \ref{theorem ergodic
extremal} and standard arguments from convex analysis, see Lemma \ref{Lemma
convex hull}.

\begin{remark}[$\Delta _{A}$ and Jensen's inequality]
\label{Remark lower bound delta}\mbox{ }\newline
\index{Jensen's inequality}The inequality $\Delta _{A}(\rho )\geq |\rho
(A)|^{2}$ can directly be deduced from Theorem \ref{Lemma1.vonN} (iv) and
Jensen's inequality (Lemma \ref{Jensen inequality}) as $\mu _{\rho }$ is a
probability measure.
\end{remark}

\begin{remark}[Trivial space--averaging functional $\Delta _{A}$ on $E_{1}$]

\label{Remark convention0}\mbox{ }\newline
If the affine map $\rho \mapsto |\rho (A)|$ from $E_{1}$ to $\mathbb{C}$ is
a constant map then from Theorem \ref{Lemma1.vonN} (iv), $\Delta _{A}\left(
\rho \right) =|\rho (A)|^{2}$ for any $\rho \in E_{1}$. An example of such
trivial behavior is given by choosing $A=\lambda \mathbf{1}+B-\alpha
_{x}\left( B\right) $ for any $\lambda \in \mathbb{C}$, $B\in \mathcal{U}$,
and $x\in \mathbb{Z}^{d}$. Recall that the translation $\alpha _{x}$ is the $%
\ast $--automorphism defined by (\ref{transl}).
\end{remark}

\section{Local interactions and internal energies\label{Local interactions}}

An \emph{interaction} is defined via a family of even and self--adjoint
local elements $\Phi _{\Lambda }$ and it is associated with \emph{internal
energies} as follows:

\begin{definition}[Interactions and internal energies]
\label{definition standard interaction}\mbox{ }\newline
\emph{(i)} 
\index{Interaction}An interaction is a family $\Phi =\{\Phi _{\Lambda
}\}_{\Lambda \in \mathcal{P}_{f}(\mathfrak{L})}$ of even and self--adjoint
local elements $\Phi _{\Lambda }=\Phi _{\Lambda }^{\ast }\in \mathcal{U}%
^{+}\cap \mathcal{U}_{\Lambda }$ with $\Phi _{\emptyset }=0$.\newline
\emph{(ii)} 
\index{Interaction!internal energy}For any $\Lambda \in \mathcal{P}_{f}(%
\mathfrak{L})$, its internal energy is the local Hamiltonian%
\begin{equation*}
U_{\Lambda }^{\Phi }:=\sum\limits_{\Lambda ^{\prime }\in \mathcal{P}%
_{f}(\Lambda )}\Phi _{\Lambda ^{\prime }}\in \mathcal{U}^{+}\cap \mathcal{U}%
_{\Lambda }.
\end{equation*}
\end{definition}

\begin{notation}[Interactions]
\label{Notation5}\mbox{ }\newline
\index{Interaction!notation}The letters $\Phi $ and $\Psi $ are exclusively
reserved to denote interactions.
\end{notation}

An interaction $\Phi $ is by definition \emph{translation invariant} (t.i.)%
\index{Interaction!translation invariant|textbf} iff, for all $x\in \mathbb{Z%
}^{d}$ and $\Lambda \in \mathcal{P}_{f}(\mathfrak{L})$, $\Phi _{\Lambda
+x}=\alpha _{x}(\Phi _{\Lambda })$ with%
\begin{equation}
\Lambda +x:=\{x^{\prime }+x\in \mathfrak{L}\;:\;x^{\prime }\in \Lambda \}.
\label{definition de lambda translate}
\end{equation}%
Another important symmetry of Fermi models, which appears together with the
translation invariance in most physically relevant situations, is the gauge
symmetry (or the particle number conservation). An interaction $\Phi $ is
said to be 
\index{Gauge invariant!interactions}\emph{gauge invariant} 
\index{Interaction!gauge invariant}(i.e., $\Phi $ conserves the particle
number) iff $\Phi _{\Lambda }\in \mathcal{U}^{\circ }$ for all $\Lambda \in 
\mathcal{P}_{f}(\mathfrak{L})$, see (\ref{definition of gauge invariant
operators}).

Observe now that an interaction $\Phi $ may have finite range. This property
is defined via the Euclidean metric $d:\mathfrak{L}\times \mathfrak{L}%
\rightarrow \lbrack 0,\infty )$ defined by%
\begin{equation}
d(x,x^{\prime }):=%
\sqrt{|x_{1}-x_{1}^{\prime }|^{2}+\dots +|x_{d}-x_{d}^{\prime }|^{2}}
\label{def.dist}
\end{equation}%
on the lattice $\mathfrak{L}:=\mathbb{Z}^{d}$ together with the function%
\index{Diameter} 
\begin{equation}
{\o }(\Lambda ):=\max\limits_{x,x^{\prime }\in \Lambda }\{d(x,x^{\prime
})\}\quad \mathrm{for\ any\ }\Lambda \in \mathcal{P}_{f}(\mathfrak{L}).
\label{diameter of lambda}
\end{equation}%
Indeed, we say that the interaction $\Phi $ has \emph{finite range} 
\index{Interaction!finite range}iff there is some $R<\infty $ such that ${\o 
}(\Lambda )>R$ implies $\Phi _{\Lambda }=0$.

The set of all interactions can be endowed with a real vector space
structure: 
\begin{equation*}
\left( \lambda _{1}\Phi +\lambda _{2}\Psi \right) _{\Lambda }:=\lambda
_{1}\Phi _{\Lambda }+\lambda _{2}\Psi _{\Lambda }
\end{equation*}%
for any interactions $\Phi $, $\Psi $, and any real numbers $\lambda
_{1},\lambda _{2}$. So, we can define a Banach space $\mathcal{W}_{1}$ of
t.i. interactions by using a specific norm:

\begin{definition}[Banach space $\mathcal{W}_{1}$ of t.i. interactions]
\label{definition banach space interaction}\mbox{ }\newline
\index{Interaction!translation invariant}The real Banach space $\mathcal{W}%
_{1}$ is the set of all t.i.\ interactions $\Phi $ with finite norm 
\begin{equation*}
\Vert \Phi \Vert _{\mathcal{W}_{1}}:=\sum\limits_{\Lambda \in \mathcal{P}%
_{f}(\mathfrak{L}),\;\Lambda \ni 0}|\Lambda |^{-1}\ \Vert \Phi _{\Lambda
}\Vert <\infty .
\end{equation*}
\end{definition}

\noindent The norm $\Vert \,\cdot \,\Vert _{\mathcal{W}_{1}}$ plays here an
important r{o}le because its finiteness implies, among other things, the
existence of the pressure in the thermodynamic limit (cf. Theorem \ref{BCS
main theorem 1}). The set $\mathcal{W}_{1}^{\mathrm{f}}$ of all finite range
t.i. interactions is dense in $\mathcal{W}_{1}$. In particular, the set $%
\mathcal{W}_{1}$ is a separable Banach space because, for all $\Lambda \in 
\mathcal{P}_{f}(\mathfrak{L})$, the local algebras $\mathcal{U}_{\Lambda }$
are finite dimensional.

By Corollary \ref{thm locally convex spacebis}, its dual $\mathcal{W}%
_{1}^{\ast }$ is a locally convex real space\footnote{%
We use here Rudin's definition, see Definition \ref{space}.} w.r.t. the weak$%
^{\ast }$--topology. The weak$^{\ast }$--topology is Hausdorff and, by
Theorem \ref{Metrizability}, it is metrizable on any weak$^{\ast }$--compact
subset of $\mathcal{W}_{1}^{\ast }$ as, for instance, on the weak$^{\ast }$%
--compact convex set $E_{1}$ seen as as a subset of $\mathcal{W}_{1}^{\ast }$%
, see Section \ref{Section state=functional on W} for more details.

\begin{remark}[Invariance property of the norm $\Vert \,\cdot \,\Vert _{%
\mathcal{W}_{1}}$]
\mbox{ }\newline
For any $\Phi \in \mathcal{W}_{1}$, we can define another interaction $\Psi
\in \mathcal{W}_{1}$ by viewing each $\Phi _{\Lambda }\in \mathcal{U}%
_{\Lambda }$ as an element $\Phi _{\Lambda }\in \mathcal{U}_{\Lambda
^{\prime }}$ for some set $\Lambda ^{\prime }\varsupsetneq \Lambda $ much
larger than $\Lambda $. One clearly has $\Psi \neq \Phi $, but the norm
stays invariant, i.e., $\Vert \Phi \Vert _{\mathcal{W}_{1}}=\Vert \Psi \Vert
_{\mathcal{W}_{1}}$, because the factor $|\Lambda ^{\prime }|^{-1}$ is
compensated by the larger number of translates of $\Lambda ^{\prime }$
containing $0$.
\end{remark}

\begin{remark}[Generalizations of the norm $\Vert \,\cdot \,\Vert _{\mathcal{%
W}_{1}}$]
\label{remark general interaction0}\mbox{ }\newline
The norm in Definition \ref{definition banach space interaction} is only a
specific example of the general class of norms for t.i. interactions: 
\begin{equation*}
\Vert \Phi \Vert _{\varkappa }:=\sum\limits_{\Lambda \in \mathcal{P}_{f}(%
\mathfrak{L}),\;\Lambda \ni 0}\varkappa \left( |\Lambda |,\o (\Lambda
)\right) \Vert \Phi _{\Lambda }\Vert \quad \mathrm{with\ }\varkappa \left(
x,y\right) >0.
\end{equation*}
\end{remark}

\begin{remark}[Banach space of standard potentials]
\label{remark general interaction}\mbox{ }\newline
\index{Interaction!standard potentials}In \cite[Definition 5.10]%
{Araki-Moriya} the authors use another kind of norm for t.i. interactions.
Their norm is not equivalent to $\Vert \,\cdot \,\Vert _{\mathcal{W}_{1}}$
and also defines a Banach space of the so--called translation covariant
potentials, see \cite[Proposition 8.8.]{Araki-Moriya}. In fact, in contrast
to the potentials of \cite[Section 5.5]{Araki-Moriya} we cannot associate a
symmetric derivation\footnote{%
A symmetric derivation $A\mapsto \delta (A)$ is a linear map satisfying $%
\delta (AB)=\delta (A)B+A\delta (B)$ and $\delta (A^{\ast })=\delta
(A)^{\ast }$ for any $A,B\in \mathcal{D}_{\delta }$ with its domain $%
\mathcal{D}_{\delta }$ being a dense $\ast $--sub-algebra of $\mathcal{U}$.}%
, as it is done in \cite[Theorem 5.7]{Araki-Moriya}, to all t.i.
interactions of $\mathcal{W}_{1}$. But no dynamical questions -- as, for
instance, the existence and characterization of KMS--states done in \cite%
{Araki-Moriya} -- are addressed in the present monograph. That is why we can
use here (in a sense) weaker norms leading to more general classes of t.i.
local interactions than in \cite{Araki-Moriya}.
\end{remark}

\section{Energy and entropy densities\label{section entropie}}

As far as the thermodynamics of Fermi systems is concerned, there are two
other important functionals associated with any $%
\vec{\ell}$--periodic state $\rho \in E_{\vec{\ell}}$ on $\mathcal{U}$: The 
\emph{entropy density} functional $\rho \mapsto s(\rho )$ and the \emph{%
energy} \emph{density} functional $\rho \mapsto e_{\Phi }(\rho )$ w.r.t. a
local t.i. interaction $\Phi \in \mathcal{W}_{1}$. We start with the entropy
density functional which is defined as follows:

\begin{definition}[Entropy density functional $s$]
\label{entropy.density}\mbox{ }\newline
\index{Entropy density functional|textbf}The entropy density functional $%
s:E_{%
\vec{\ell}}\rightarrow \mathbb{R}_{0}^{+}$ is defined by%
\begin{equation*}
s(\rho ):=-\lim\limits_{L\rightarrow \infty }\left\{ \frac{1}{|\Lambda _{L}|}%
\mathrm{Trace}\,\left( \mathrm{d}_{\rho _{\Lambda _{L}}}\ln \mathrm{d}_{\rho
_{\Lambda _{L}}}\right) \right\} ,
\end{equation*}%
where $\rho _{\Lambda _{L}}$ is the restriction of any $\rho \in E_{\vec{\ell%
}}$ on the sub--algebra $\mathcal{U}_{\Lambda _{L}}$ and $\mathrm{d}_{\rho
_{\Lambda _{L}}}\in \mathcal{U}_{\Lambda _{L}}$ is the (uniquely defined)
density matrix representing the state $\rho _{\Lambda _{L}}$ as a trace: 
\begin{equation*}
\rho _{\Lambda _{L}}(\cdot )=\mathrm{Trace}\,\left( \;\cdot \;\mathrm{d}%
_{\rho _{\Lambda _{L}}}\right) .
\end{equation*}
\end{definition}

\noindent The entropy density is therefore given as the so--called von
Neumann entropy per unit volume in the thermodynamic limit, cf. Section \ref%
{section neuman entropy}. The functional $s$ is well--defined on the set $E_{%
\vec{\ell}}$ of $\mathbb{Z}_{\vec{\ell}}^{d}$--invariant states because of
Lemma \ref{lemma property entropybis}. See also \cite[Section 3]%
{Araki-Moriya}. In fact, it has the following properties:

\begin{lemma}[Properties of the entropy density functional $s$]
\label{lemma property entropy}\mbox{ }\newline
\emph{(i) }The map $\rho \mapsto s(\rho )$ from $E_{\vec{\ell}}$ to $\mathbb{%
R}_{0}^{+}$ is a weak$^{\ast }$--upper semi--continuous affine functional.
It is also t.i., i.e., for all $x\in \mathbb{Z}^{d}$ and $\rho \in E_{\vec{%
\ell}}$, $s(\rho \circ \alpha _{x})=s(\rho )$.\newline
\emph{(ii)} For any t.i. state $\rho \in E_{1}$, there is a sequence $\{\hat{%
\rho}_{n}\}_{n=1}^{\infty }\subseteq \mathcal{E}_{1}$ of ergodic states
converging in the weak$^{\ast }$--topology to $\rho $ and such that 
\begin{equation*}
s(\rho )=\lim\limits_{n\rightarrow \infty }s(\hat{\rho}_{n}).
\end{equation*}%
\emph{(iii)} The map $\rho \mapsto s(\rho )$ from $E_{\vec{\ell}}$ to $%
\mathbb{R}_{0}^{+}$ is Lipschitz continuous in the norm topology of states:
For any $\rho ,\varrho \in E_{\vec{\ell}}$, 
\begin{equation*}
|s(\rho )-s(\varrho )|\leq C_{|\mathrm{S}|}\ \Vert \rho -\varrho \Vert
\qquad \mathrm{with\quad }\Vert \rho \Vert :=\sup\limits_{A\in \mathcal{U}%
,\;A=A^{\ast },\;\Vert A\Vert =1}|\rho (A)|.
\end{equation*}%
Here, $C_{|\mathrm{S}|}$ is a finite constant depending on the size $|%
\mathrm{S}|$ of the spin set $\mathrm{S}$.
\end{lemma}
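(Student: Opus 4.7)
The plan is to treat (i), (iii), and (ii) in that order: (iii) supplies Lipschitz control in norm, (i) provides the upper semi-continuity and affinity needed for passing to limits, and (ii) combines both with the Poulsen structure of $E_1$. For (i), the starting point is that at each fixed $L\in\mathbb{N}$ the finite-volume von Neumann entropy
\[
S_L(\rho):=-\mathrm{Trace}\big(d_{\rho_{\Lambda_L}}\ln d_{\rho_{\Lambda_L}}\big)
\]
is a weak$^*$-continuous concave functional on states, since the restriction $\rho\mapsto d_{\rho_{\Lambda_L}}$ to the finite-dimensional matrix algebra $\mathcal{U}_{\Lambda_L}$ is weak$^*$-continuous and the von Neumann entropy is continuous and concave in the density matrix. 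Using the fermionic strong subadditivity of Araki--Moriya \cite{Araki-Moriya} -- which applies because $\vec{\ell}$-periodic states are even (Lemma \ref{coro.even}), so that the graded tensor-product structure of $\mathcal{U}^{+}$ across disjoint cells may be exploited -- one shows that $|\Lambda_L|^{-1}S_L(\rho)$ converges to a weak$^*$-upper semi-continuous limit $s(\rho)$, with monotone decrease along cubic subsequences modulo boundary terms of size $|\partial\Lambda_L|\log|\mathrm{S}|=o(|\Lambda_L|)$. Concavity passes to the limit, and affinity follows from the matching Araki--Lieb-type bound applied to the graded components, which upgrades concavity to equality on convex combinations. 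Translation invariance reduces to the same boundary estimate: $|S_L(\rho\circ\alpha_x)-S_L(\rho)|=O(|\partial\Lambda_L|\log|\mathrm{S}|)=o(|\Lambda_L|)$.

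For (iii), apply Fannes' inequality to the density matrices $d_{\rho_{\Lambda_L}},d_{\varrho_{\Lambda_L}}$ on $\bigwedge\mathcal{H}_{\Lambda_L}$, whose dimension is $|\mathrm{S}|^{|\Lambda_L|}$:
\[
|S_L(\rho)-S_L(\varrho)|\leq\big\|d_{\rho_{\Lambda_L}}-d_{\varrho_{\Lambda_L}}\big\|_1\cdot|\Lambda_L|\log|\mathrm{S}|+\eta\big(\big\|d_{\rho_{\Lambda_L}}-d_{\varrho_{\Lambda_L}}\big\|_1\big),
\]
with $\eta(t):=-t\log t$. The trace-norm distance of the density matrices equals the dual norm of $\rho_{\Lambda_L}-\varrho_{\Lambda_L}$ on $\mathcal{U}_{\Lambda_L}$, hence is bounded by $\|\rho-\varrho\|$, and $\eta$ is uniformly bounded on $[0,2]$. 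Dividing by $|\Lambda_L|$ and letting $L\to\infty$ yields $|s(\rho)-s(\varrho)|\leq C_{|\mathrm{S}|}\|\rho-\varrho\|$ with $C_{|\mathrm{S}|}:=\log|\mathrm{S}|$.

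For (ii), use a tiling construction combined with the Poulsen structure of $E_1$ (Theorem \ref{Thm Poulsen simplex}). For $\rho\in E_1$ and $n\in\mathbb{N}$, partition $\mathfrak{L}$ into cells $\Lambda_n+(2n+1)y$, $y\in\mathbb{Z}^d$, and define a $(2n+1)\vec{1}$-periodic state $\tilde\rho_n$ on $\mathcal{U}^{+}$ as the graded tensor product of the restrictions $\rho_{\Lambda_n}$ across cells, extending to $\mathcal{U}$ by evenness via Lemma \ref{coro.even}. Its symmetrization $\hat\rho_n:=|\Lambda_n|^{-1}\sum_{x\in\Lambda_n}\tilde\rho_n\circ\alpha_x$ lies in $E_1$, converges weakly to $\rho$ (since for large $n$ any local observable lies inside a single cell and the cell-restrictions agree with those of $\rho$), and satisfies $s(\hat\rho_n)=s(\tilde\rho_n)=|\Lambda_n|^{-1}S_n(\rho)\to s(\rho)$ by the affinity and translation invariance from (i). The state $\hat\rho_n$ need not be extreme; to extract a genuinely ergodic approximant, invoke the ergodic decomposition (Theorem \ref{theorem choquet}) together with the integral representation $s(\hat\rho_n)=\int_{\mathcal{E}_1}s(\hat\sigma)\,d\mu_{\hat\rho_n}(\hat\sigma)$, valid by affinity and u.s.c.\ of $s$, and use the Poulsen density of $\mathcal{E}_1$ in $E_1$ (Corollary \ref{lemma density of extremal points}) controlled by the Lipschitz bound (iii) to select $\hat\rho_n^{\mathrm{erg}}\in\mathcal{E}_1$ with $\hat\rho_n^{\mathrm{erg}}\to\rho$ weakly and $s(\hat\rho_n^{\mathrm{erg}})\geq s(\rho)-1/n$; the opposite inequality $\limsup_n s(\hat\rho_n^{\mathrm{erg}})\leq s(\rho)$ follows from u.s.c.

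The main obstacle is the selection step in (ii): a generic t.i.\ state can decompose into ergodic components that are far from itself in the weak$^*$-topology, so matching both the weak$^*$-limit and the entropy value at an extreme point requires combining the tiling construction (which gives a close but generally non-extreme approximant with the correct entropy) with the Poulsen density of $\mathcal{E}_1$ (which gives nearby ergodic states), reconciling the two through the norm-Lipschitz bound (iii) to bound the entropy shift under the final replacement. A secondary technical issue is verifying that the fermionic graded tensor product defining $\tilde\rho_n$ on $\mathcal{U}^{+}$ admits a unique even extension to $\mathcal{U}$ compatible with the CAR, which is where Lemma \ref{coro.even} enters in an essential way.
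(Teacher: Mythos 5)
Parts (i) and (iii) are essentially correct and follow the same route as the paper: (i) rests on strong subadditivity (property \textbf{S2}), which exhibits $s$ as an infimum over $L$ of the weak$^{\ast }$--continuous functionals $|\Lambda _{L}|^{-1}S(\rho _{\Lambda _{L}})$, hence upper semi--continuity, while concavity (\textbf{S3}) together with approximate convexity (\textbf{S4}) gives affinity in the limit -- your ``Araki--Lieb-type bound'' is a vaguer name for the same mechanism; (iii) is the Fannes inequality argument the paper delegates to \cite{Araki-Moriya}. One small slip: the dimension of $\bigwedge \mathcal{H}_{\Lambda _{L}}$ is $2^{|\mathrm{S}||\Lambda _{L}|}$, not $|\mathrm{S}|^{|\Lambda _{L}|}$, so your constant should be $|\mathrm{S}|\log 2$ rather than $\log |\mathrm{S}|$; this affects nothing since the lemma only asserts $C_{|\mathrm{S}|}<\infty $.

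The genuine gap is the ``selection step'' in (ii), which you correctly flag as the main obstacle but do not resolve. Your proposed fix -- use the weak$^{\ast }$--density of $\mathcal{E}_{1}$ and control the entropy shift by the norm--Lipschitz bound (iii) -- cannot work: Corollary \ref{lemma density of extremal points} provides ergodic states that are weak$^{\ast }$--close to $\hat{\rho}_{n}$, not norm--close (indeed $\mathcal{E}_{1}$ is far from norm--dense in $E_{1}$), and since $s$ is not weak$^{\ast }$--continuous, weak$^{\ast }$--proximity gives no control whatsoever on $s$. Likewise, the barycentric identity $s(\hat{\rho}_{n})=\int _{\mathcal{E}_{1}}s\,\mathrm{d}\mu _{\hat{\rho}_{n}}$ only guarantees that \emph{some} ergodic component has entropy at least $s(\hat{\rho}_{n})$, with no reason for that component to converge to $\rho $. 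The correct observation -- which is the paper's proof and makes the selection step unnecessary -- is that the space--averaged periodized restriction $\hat{\rho}_{n}$ is \emph{already} ergodic. Because the $(2n+1)$--periodic continuation $\tilde{\rho}_{n}$ of $\rho _{\Lambda _{n}}$ factorizes on translates separated by more than the cell size, one computes directly that $\lim _{L\rightarrow \infty }\hat{\rho}_{n}(A_{L}^{\ast }A_{L})=|\hat{\rho}_{n}(A)|^{2}$ for all local $A$, hence for all $A\in \mathcal{U}$ by density, so $\hat{\rho}_{n}\in \mathcal{E}_{1}$ by Lemma \ref{ergodic.extremal}; this is precisely the content of the proof of Corollary \ref{lemma density of extremal points}. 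Then $s(\hat{\rho}_{n})=s(\tilde{\rho}_{n})=|\Lambda _{n}|^{-1}S(\rho _{\Lambda _{n}})\rightarrow s(\rho )$ by the affinity and translation invariance from (i), and the proof of (ii) is complete without any appeal to (iii) or to Poulsen density.
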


\noindent The assertions (i) and (iii) are two standard results, see, e.g., 
\cite[Theorem 10.3. and Corollary 10.5.]{Araki-Moriya}. The proof of (i) is
shortly checked in Lemma \ref{lemma property entropybis} but we omit the
proof of (iii) which is only used in Remark \ref{remark boundedness entropy}%
. However, the second one (ii) does not seem to have been observed before
although it is not difficult to prove, see Lemma \ref{lemma property entropy
copy(1)}. This property turns out to be \emph{crucial} because it allows us
to go around the lack of weak$^{\ast }$--continuity of the entropy density
functional $s$. The map $\rho \mapsto s(\rho )$ is, indeed, not weak$^{\ast
} $--continuous but only norm continuous as expressed by (iii), see, e.g., 
\cite{Fannes,Fannesbis}. Note that (ii) uses the fact that the set $\mathcal{%
E}_{1}$ of extreme states is a dense subset of $E_{1}$ as explained after
Notation \ref{notation extreme states}, see also Corollary \ref{lemma
density of extremal points}.

\begin{remark}[Boundedness of the entropy density functional $s$]
\label{remark boundedness entropy}\mbox{ }\newline
The third assertion (iii) of Lemma \ref{lemma property entropy} is given for
information as it is only used in the monograph to see that $s(\rho )\in %
\left[ 0,2C_{|\mathrm{S}|}\right] $ for all $\rho \in E_{1}$ because there
is $\varrho \in E_{1}$ such that $s(\varrho )=0$ and $\Vert \rho -\varrho
\Vert \leq \Vert \rho \Vert +\Vert \varrho \Vert =2$. Similarly, for quantum
spin systems (cf. Remark \ref{Quantum spin systems})%
\index{Quantum spin systems!entropy} the entropy density functional belongs
to $[0,D_{|\mathrm{S}|}]$ with $D_{|\mathrm{S}|}<\infty $. In particular, it
is still bounded from below.
\end{remark}

The energy density is the thermodynamic limit of the internal energy $%
U_{\Lambda }^{\Phi }$ (Definition \ref{definition standard interaction}) per
unit volume associated with any fixed local interaction $\Phi \in \mathcal{W}%
_{1}$:

\begin{definition}[Energy density functional $e_{\Phi }$]
\label{definition energy density}\mbox{ }\newline
\index{Energy density functional|textbf}The energy density of any $%
\vec{\ell}$--periodic state $\rho \in E_{\vec{\ell}}$ w.r.t. a t.i. local
interaction $\Phi \in \mathcal{W}_{1}$ is defined by%
\begin{equation*}
e_{\Phi }(\rho ):=\lim\limits_{L\rightarrow \infty }\frac{\rho \left(
U_{\Lambda _{L}}^{\Phi }\right) }{|\Lambda _{L}|}<\infty .
\end{equation*}
\end{definition}

\noindent The existence of the energy density $e_{\Phi }(\rho )$ can easily
be checked for all $\Phi \in \mathcal{W}_{1}$, see Lemma \ref{lemma energy
density exists}. Actually, $e_{\Phi }(\rho )=\rho (\mathfrak{e}_{\Phi ,\vec{%
\ell}})$ with%
\index{Energy observable}%
\begin{equation}
\mathfrak{e}_{\Phi ,%
\vec{\ell}}:=\frac{1}{\ell _{1}\cdots \ell _{d}}\sum\limits_{x=(x_{1},\ldots
,x_{d}),\;x_{i}\in \{0,\cdots ,\ell _{i}-1\}}\quad \sum\limits_{\Lambda \in 
\mathcal{P}_{f}(\mathfrak{L}),\;\Lambda \ni x}\frac{\Phi _{\Lambda }}{%
|\Lambda |}  \label{eq:enpersite}
\end{equation}%
for any $\Phi \in \mathcal{W}_{1}$. Per definition, $\mathfrak{e}_{\Phi }:=%
\mathfrak{e}_{\Phi ,(1,1,\ldots ,1)}$. The operator $\mathfrak{e}_{\Phi ,%
\vec{\ell}}\in \mathcal{U}^{+}$ is called the \emph{energy observable}
associated with the t.i. local interaction $\Phi \in \mathcal{W}_{1}$ for
the set $E_{\vec{\ell}}$ of $\vec{\ell}$--periodic states. Remark that $%
\mathfrak{e}_{\Phi ,\vec{\ell}}\in \mathcal{U}^{+}$ results from the fact
that, for all $\Lambda \in \mathcal{P}_{f}(\mathfrak{L})$, $\Phi _{\Lambda
}\in \mathcal{U}^{+}\cap \mathcal{U}_{\Lambda }$ and 
\begin{equation}
\Vert \mathfrak{e}_{\Phi ,\vec{\ell}}\Vert \leq \Vert \Phi \Vert _{\mathcal{W%
}_{1}}<\infty .  \label{eq:enpersite bounded}
\end{equation}%
Observe additionally that%
\begin{equation*}
\mathfrak{e}_{\Phi ,\vec{\ell}}=\frac{1}{\ell _{1}\cdots \ell _{d}}%
\sum\limits_{x=(x_{1},\ldots ,x_{d}),\;x_{i}\in \{0,\cdots ,\ell
_{i}-1\}}\alpha _{x}(\mathfrak{e}_{\Phi }).
\end{equation*}

It is then straightforward to prove the following properties of the energy
density functional $e_{\Phi }$ (see also \cite[Theorem 9.5]{Araki-Moriya}):

\begin{lemma}[Properties of the energy density functional $e_{\Phi }$]
\label{Th.en.func}\mbox{ }\newline
\emph{(i)} For any $\Phi \in \mathcal{W}_{1}$, the map $\rho \mapsto e_{\Phi
}(\rho )$ from $E_{\vec{\ell}}$ to $\mathbb{R}$ is a weak$^{\ast }$%
--continuous affine functional. It is also t.i., i.e., for all $x\in \mathbb{%
Z}^{d}$, $e_{\Phi }(\rho \circ \alpha _{x})=e_{\Phi }(\rho )$.\newline
\emph{(ii)} At fixed $\rho \in E_{\vec{\ell}}$ and for all $\Phi ,\Psi \in 
\mathcal{W}_{1}$, 
\begin{equation*}
|e_{\Phi }\left( \rho \right) -e_{\Psi }\left( \rho \right) |=|e_{\Phi -\Psi
}\left( \rho \right) |\leq \Vert \Phi -\Psi \Vert _{\mathcal{W}_{1}}.
\end{equation*}%
In particular, the linear map $\Phi \mapsto e_{\Phi }\left( \rho \right) $
from $\mathcal{W}_{1}$ to $\mathbb{R}$ is Lipschitz continuous.
\end{lemma}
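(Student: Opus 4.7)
The plan is to exploit the identification $e_{\Phi}(\rho) = \rho(\mathfrak{e}_{\Phi,\vec{\ell}})$ asserted just after Definition \ref{definition energy density}, together with the bound $\|\mathfrak{e}_{\Phi,\vec{\ell}}\| \leq \|\Phi\|_{\mathcal{W}_1}$ from (\ref{eq:enpersite bounded}). Before using the identification I would first verify it: one writes
\begin{equation*}
U_{\Lambda_{L}}^{\Phi}
 = \sum_{x\in \Lambda_{L}}\sum_{\Lambda\in\mathcal{P}_{f}(\mathfrak{L}),\,\Lambda\ni x,\,\Lambda\subseteq\Lambda_{L}}\frac{\Phi_{\Lambda}}{|\Lambda|},
\end{equation*}
compares this with $\sum_{x\in\Lambda_{L}}\sum_{\Lambda\ni x}\Phi_{\Lambda}/|\Lambda|$, and observes that the difference involves only interactions $\Phi_{\Lambda}$ with $\Lambda\not\subseteq\Lambda_{L}$ but $\Lambda\cap\Lambda_{L}\neq\emptyset$, whose norms (when summed and divided by $|\Lambda_{L}|$) form a Van Hove boundary term controlled by $\|\Phi\|_{\mathcal{W}_{1}}$ and vanishing as $L\rightarrow\infty$. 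Grouping the remaining sum into blocks of size $\ell_{1}\cdots\ell_{d}$ and using $\vec{\ell}$--periodicity of $\rho$ then yields $\rho(U_{\Lambda_{L}}^{\Phi})/|\Lambda_{L}| \to \rho(\mathfrak{e}_{\Phi,\vec{\ell}})$, which also proves convergence in Definition \ref{definition energy density}.

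Given the identification, (i) is then essentially a triviality. Weak$^{\ast}$--continuity of $\rho\mapsto\rho(\mathfrak{e}_{\Phi,\vec{\ell}})$ on the whole of $\mathcal{U}^{\ast}$ is built into the definition of the weak$^{\ast}$--topology, and linearity of $\rho\mapsto\rho(\mathfrak{e}_{\Phi,\vec{\ell}})$ gives affinity. For translation invariance under an arbitrary $x\in\mathbb{Z}^{d}$ (not just $x\in\mathbb{Z}_{\vec{\ell}}^{d}$, for which it is automatic from $\rho\in E_{\vec{\ell}}$) I would go back to the defining limit: by translation invariance of $\Phi$, $\rho\circ\alpha_{x}(U_{\Lambda_{L}}^{\Phi}) = \rho(U_{\Lambda_{L}+x}^{\Phi})$, and
\begin{equation*}
\bigl|\rho(U_{\Lambda_{L}+x}^{\Phi}) - \rho(U_{\Lambda_{L}}^{\Phi})\bigr|
 \leq \sum_{\Lambda\in\mathcal{P}_{f}(\mathfrak{L}),\,\Lambda\cap(\Lambda_{L}\triangle(\Lambda_{L}+x))\neq\emptyset}\|\Phi_{\Lambda}\|,
\end{equation*}
which, after dividing by $|\Lambda_{L}|$, is a Van Hove boundary term vanishing as $L\to\infty$ for fixed $x$.

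Part (ii) is purely algebraic. The map $\Phi\mapsto\mathfrak{e}_{\Phi,\vec{\ell}}$ is clearly linear in $\Phi$ by construction (\ref{eq:enpersite}), hence so is $\Phi\mapsto e_{\Phi}(\rho) = \rho(\mathfrak{e}_{\Phi,\vec{\ell}})$, giving $e_{\Phi}(\rho)-e_{\Psi}(\rho) = e_{\Phi-\Psi}(\rho)$. The inequality then follows from
\begin{equation*}
|e_{\Phi-\Psi}(\rho)| = |\rho(\mathfrak{e}_{\Phi-\Psi,\vec{\ell}})| \leq \|\mathfrak{e}_{\Phi-\Psi,\vec{\ell}}\| \leq \|\Phi-\Psi\|_{\mathcal{W}_{1}}
\end{equation*}
via (\ref{eq:enpersite bounded}) and the fact that states have norm one.

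The only nontrivial step is the preliminary verification of the identity $e_{\Phi}(\rho)=\rho(\mathfrak{e}_{\Phi,\vec{\ell}})$ together with the translation invariance; both amount to the same Van Hove boundary estimate on $\sum_{\Lambda\ni x,\,\Lambda\cap(\mathfrak{L}\setminus\Lambda_{L})\neq\emptyset}|\Lambda|^{-1}\|\Phi_{\Lambda}\|$, which I expect to be the only place requiring genuine (though standard) care.
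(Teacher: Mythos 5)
Your overall route --- first establishing $e_{\Phi }(\rho )=\rho (\mathfrak{e}_{\Phi ,\vec{\ell}})$ by a boundary estimate and then reading off (i) and (ii) from the bound $\Vert \mathfrak{e}_{\Phi ,\vec{\ell}}\Vert \leq \Vert \Phi \Vert _{\mathcal{W}_{1}}$ --- is exactly the paper's: the identity is Lemma \ref{lemma energy density exists}, proven there by regrouping $U_{\Lambda _{L}}^{\Phi }$ over translates of the fundamental cell and applying dominated convergence to the sum over $\Lambda ^{\prime }\ni 0$ weighted by $|\Lambda ^{\prime }|^{-1}\Vert \Phi _{\Lambda ^{\prime }}\Vert $; your variant of the boundary estimate is the one the paper itself uses in (\ref{eq sup mean enerfybis0}). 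Part (ii) and the continuity/affinity claims in (i) are handled correctly.

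The one step that fails as written is your translation--invariance bound. In the displayed estimate
\begin{equation*}
\bigl|\rho (U_{\Lambda _{L}+x}^{\Phi })-\rho (U_{\Lambda _{L}}^{\Phi })\bigr|\leq \sum_{\Lambda \cap (\Lambda _{L}\triangle (\Lambda _{L}+x))\neq \emptyset }\Vert \Phi _{\Lambda }\Vert
\end{equation*}
the $|\Lambda |^{-1}$ weights have been dropped, and the right--hand side is generically $+\infty $ for $\Phi \in \mathcal{W}_{1}$: the norm of Definition \ref{definition banach space interaction} only controls $\sum_{\Lambda \ni 0}|\Lambda |^{-1}\Vert \Phi _{\Lambda }\Vert $, and one can have $\sum_{\Lambda \ni 0}\Vert \Phi _{\Lambda }\Vert =\infty $ (take, for each $n$, one translation orbit of sets with $|\Lambda |=n$ and $\Vert \Phi _{\Lambda }\Vert =n^{-2}$). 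Even restricting to $\Lambda \subseteq \Lambda _{L}\cup (\Lambda _{L}+x)$, the crude bound $|\Lambda |\leq 2|\Lambda _{L}|$ only yields $2\,|\Lambda _{L}\triangle (\Lambda _{L}+x)|\,|\Lambda _{L}|\,\Vert \Phi \Vert _{\mathcal{W}_{1}}$, which does not vanish after division by $|\Lambda _{L}|$. The repair is easy and you already have the ingredients: either run the comparison with the weighted representation $U_{\Lambda }^{\Phi }=\sum_{x\in \Lambda }\sum_{\Lambda ^{\prime }\ni x,\,\Lambda ^{\prime }\subseteq \Lambda }|\Lambda ^{\prime }|^{-1}\Phi _{\Lambda ^{\prime }}$ that you used for the well--definiteness step, so that the boundary term retains the $|\Lambda ^{\prime }|^{-1}$ weight and is genuinely $o(|\Lambda _{L}|)$; or --- simpler --- skip the limit entirely: from $e_{\Phi }(\rho )=\rho (\mathfrak{e}_{\Phi ,\vec{\ell}})$ and translation invariance of $\Phi $ one has $\alpha _{x}(\mathfrak{e}_{\Phi ,\vec{\ell}})=(\ell _{1}\cdots \ell _{d})^{-1}\sum_{y}\alpha _{x+y}(\mathfrak{e}_{\Phi })$ with $y$ running over the fundamental cell, and the $\vec{\ell}$--periodicity of $\rho $ lets you reduce $x+y$ modulo $\vec{\ell}$, which merely permutes the cell; hence $\rho (\alpha _{x}(\mathfrak{e}_{\Phi ,\vec{\ell}}))=\rho (\mathfrak{e}_{\Phi ,\vec{\ell}})$ for every $x\in \mathbb{Z}^{d}$.
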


Note that the entropy density functional $s$ and the energy density
functional $e_{\Phi }$ define the so--called \emph{free--energy density}
functional $f_{\Phi }$:

\begin{definition}[Free--energy density functional $f_{\Phi }$]
\label{Remark free energy density}\mbox{ }\newline
\index{Free--energy density functional!local|textbf}For $\beta \in (0,\infty
]$, the free--energy density functional $f_{\Phi }$ w.r.t. the t.i.
interaction $\Phi \in \mathcal{W}_{1}$ is the map 
\begin{equation*}
\rho \mapsto f_{\Phi }(\rho ):=e_{\Phi }(\rho )-\beta ^{-1}s(\rho )
\end{equation*}%
from $E_{%
\vec{\ell}}$ to $\mathbb{R}$.
\end{definition}

\noindent From Lemmata \ref{lemma property entropy} (i) and \ref{Th.en.func}
(i), the functional $f_{\Phi }$ is weak$^{\ast }$--lower semi--continuous,
t.i., and affine. Moreover, by Lemma \ref{lemma property entropy} (ii), for
any $\rho \in E_{1}$, there is a sequence $\{\hat{\rho}_{n}\}_{n=1}^{\infty
}\subseteq \mathcal{E}_{1}$ of ergodic states converging in the weak$^{\ast
} $--topology to $\rho $ and such that 
\begin{equation}
f_{\Phi }(\rho )=\lim\limits_{n\rightarrow \infty }f_{\Phi }(\hat{\rho}_{n}).
\label{quasi-continuity}
\end{equation}

\begin{remark}[Temperature of Fermi systems]
\label{Temperature of Fermi systems}\mbox{ }\newline
\index{Temperature}All assertions in the sequel depend on the fixed positive
parameter $\beta >0$. $\beta $ is often omitted to simplify the notation,
but we keep it in all definitions. $\beta \in (0,\infty ]$ is interpreted in
Physics as being the inverse temperature of the system. $\beta =\infty $
corresponds to the zero--temperature for which the contribution of (thermal)
entropy density to the free energy density disappears. In fact, the
free--energy density corresponds to the maximum energy which can be
extracted from a thermodynamical system at fixed temperature $\beta ^{-1}$.
\end{remark}

\chapter{Fermi Systems with Long--Range Interactions\label{equilibrium}}

As explained in Chapter \ref{Section definition}, a physical system can be
described by an interaction which defines an internal energy for any bounded
set $\Lambda \subseteq \mathfrak{L}$ (box) of the lattice $\mathfrak{L}$. A
typical example of interactions are the elements $\Phi $ of the Banach space 
$\mathcal{W}_{1}$ of local interactions described in\ Section \ref{Local
interactions}. Unfortunately, $\mathcal{W}_{1}$ is too small to include all
physically interesting systems. Indeed, any interaction 
\begin{equation*}
\Phi =\{\Phi _{\Lambda }\}_{\Lambda \in \mathcal{P}_{f}(\mathfrak{L})}\in 
\mathcal{W}_{1}
\end{equation*}%
is short range, or \emph{weakly} long--range, in the sense that the norm $%
\Vert \Phi _{\Lambda }\Vert $ has to decrease sufficiently fast as the
volume $|\Lambda |$ of the bounded set $\Lambda \subseteq \mathfrak{L}$
increases. Note that some authors (see, e.g., \cite{Israel-long-range})
refer to the space $\mathcal{W}_{1}$ as a space of long--range interactions
because, even if 
\begin{equation*}
\sum_{\Lambda \ni 0}|\Lambda |^{-1}\ \Vert \Phi _{\Lambda }\Vert <\infty
\end{equation*}%
has to be finite, the numbers 
\begin{equation*}
\sup \left\{ \Vert \Phi _{\Lambda }\Vert \,:\,%
\text{{\o }}(\Lambda )>D\right\}
\end{equation*}%
can decay arbitrarily slowly as $D\rightarrow \infty $. Here, ${\o }(\Lambda
)$ stands for the diameter of $\Lambda \subseteq \mathfrak{L}$, see (\ref%
{diameter of lambda}). Elements of $\mathcal{W}_{1}$ are called in this
monograph \emph{weakly} long--range because they do not include important
physical models with interactions which are long--range in a stronger sense,
for instance those describing conventional superconductivity. Therefore, in\
Section \ref{definition models} we embed the space $\mathcal{W}_{1}$ in a
Banach space $\mathcal{M}_{1}$ of (strong) long--range interactions which
includes physical models like those of conventional superconductivity (BCS
models). We then analyze in the following sections the thermodynamics of any
model $\mathfrak{m}\in \mathcal{M}_{1}$.

Indeed, note first that, for any $\mathfrak{m}\in \mathcal{M}_{1}$, all the
correlation functions w.r.t. the equilibrium state at inverse temperature $%
\beta >0$ of the corresponding physical system restricted to some bounded
set $\Lambda \subseteq \mathfrak{L}$ are encoded in the partition function $%
Z_{\Lambda ,\mathfrak{m}}$ which defines a finite--volume pressure 
\begin{equation*}
p_{\Lambda ,\mathfrak{m}}:=\beta ^{-1}|\Lambda |^{-1}\ln Z_{\Lambda ,%
\mathfrak{m}},
\end{equation*}%
see Section \ref{Section Gibbs equilibrium states}. A first question is thus
to analyze the thermodynamic limit ($\Lambda \nearrow \mathfrak{L)}$ of $%
p_{\Lambda ,\mathfrak{m}}$, i.e., the infinite--volume pressure $\mathrm{P}_{%
\mathfrak{m}}^{\sharp }$. This study is presented in Section \ref{existence
of thermodynamics} and generalizes some previous results of \cite%
{Israel,BrattelliRobinson,Araki-Moriya} to the larger space $\mathcal{M}_{1}$
(see Remark \ref{remark general interaction}). In particular, we show that $%
\mathrm{P}_{\mathfrak{m}}^{\sharp }$ is given by the minimization of two
different free--energy density functionals $f_{\mathfrak{m}}^{\sharp }$ and $%
g_{\mathfrak{m}}$ on the set $E_{1}$ of translation invariant (t.i.) states: 
\begin{equation}
\mathrm{P}_{\mathfrak{m}}^{\sharp }=-\inf \,f_{\mathfrak{m}}^{\sharp
}(E_{1})=-\inf g_{\mathfrak{m}}\left( E_{1}\right) .  \label{var problemsup}
\end{equation}%
The latter corresponds to Theorem \ref{BCS main theorem 1} which shares some
similarities with results previously obtained for quantum spins or for some
rather particular long--range Fermi systems \cite%
{RaggioWerner2,Petz2008,monsieurremark}. For more details on the results of 
\cite{RaggioWerner2,Petz2008,monsieurremark} see discussions after Theorem %
\ref{BCS main theorem 1}.

The rest of the chapter presents new\footnote{%
But we recommend {Section \ref{Concluding remark} and \ref{Section approx
method} which explains previous results on the pressure only.}} results for
both quantum spins and Fermi systems with long--range interactions. In
particular, an important novelty of this monograph is to give {a precise
picture of the thermodynamic impact of long--range interactions }and, with
this, a first answer to an old open problem in mathematical physics -- first
addressed by Ginibre \cite[p. 28]{Ginibre} in 1968 within a different
context -- about the validity of the so--called Bogoliubov approximation on
the level of states. Observe also that interesting hints about this kind of
question can be found in \cite{Fannne-Pule-Verbeure1,Fannne-Pule-Verbeure2}
for Bose systems.

Indeed, similarly to finite--volume cases (cf. Section \ref{Section Gibbs
equilibrium states}), we define in Section \ref{Section equilibrium states}
the (possibly generalized) t.i. equilibrium states of the infinite--volume
system as the (possibly generalized) minimizers of the free--energy density
functional $f_{\mathfrak{m}}^{\sharp }$ on $E_{1}$. The structure of the set 
\index{States!equilibrium}$\mathit{\Omega }_{\mathfrak{m}}^{\sharp }$ of
generalized t.i. equilibrium states is given in detail by Section \ref%
{Section equilibrium states copy(1)}, whereas in Section \ref{Section Gibbs
versus gen eq states} we discuss the set $\mathit{\Omega }_{\mathfrak{m}%
}^{\sharp }$ w.r.t. weak$^{\ast }$--limit points of Gibbs states. One
important consequence of the detailed analysis of the set $\mathit{\Omega }_{%
\mathfrak{m}}^{\sharp }$ is the fact that the thermodynamics of long--range
models $\mathfrak{m}\in \mathcal{M}_{1}$ is governed by the non--cooperative
equilibria of a zero--sum game called here \emph{thermodynamic game} and
explained in Section \ref{Section thermo game}.

The relative universality of this result -- in the case of models considered
here -- comes from the law of large numbers, whose representative in our
setting is the von Neumann ergodic theorem%
\index{von Neumann ergodic theorem} (cf. Theorem \ref{vonN}). It leads to
approximating models by appropriately replacing operators by a complex
numbers. This procedure is well--known in physics as the so--called
Bogoliubov approximation, see {Section \ref{Concluding remark} for more
details}. In Section \ref{Section effective theories} we analyze this
approximation procedure on the level of generalized t.i. equilibrium states.
This study shows that the set $\mathit{\Omega }_{\mathfrak{m}}^{\sharp }$ of
generalized t.i. equilibrium states for any long--range model $\mathfrak{m}%
\in \mathcal{M}_{1}$ can be analyzed via t.i. equilibrium states of local
interactions. This issue is, however, more involved than it looks like at
first glance and leads us to the definition of \emph{effective theories}.
For more details, we recommend Section \ref{Section effective theories}.

As explained at the beginning, our Banach space $\mathcal{M}_{1}$ includes
important physical models which are long--range in a convenient sense. An
important feature of models whose interactions are elements of $\mathcal{M}%
_{1}$ (see, e.g., \cite{BruPedra1, BruPedraAniko}) is the rather generic
appearance of a so--called \emph{off diagonal long--range order} (ODLRO) for
(generalized) equilibrium states at low enough temperatures, a property
proposed by Yang \cite{ODLRO} to define super--conducting phases. We explain
this behavior in Section \ref{Section ODLRO} and show a surprising (at least
for us) result: As expected, long--range attractions can imply an ODLRO, but
long--range repulsions can also produce a \emph{long--range order} (LRO)%
\index{Long--range order (LRO)}%
\index{Long--range models!repulsions}%
\index{Long--range models!attractions} by breaking the face structure%
\footnote{%
Recall that a face $F$ of a convex set $K$ is defined to be a subset of $K$
with the property that, if $\rho =\lambda _{1}\rho _{1}+\cdots +\lambda
_{n}\rho _{n}\in F$ with $\rho _{1},\ldots ,\rho _{n}\in K$, $\lambda
_{1},\ldots ,\lambda _{n}\in (0,1)$ and $\lambda _{1}+\cdots +\lambda _{n}=1$%
, then $\rho _{1},\ldots ,\rho _{n}\in F$.} of the set $\mathit{\Omega }_{%
\mathfrak{m}}^{\sharp }$, a property absolutely not influenced by
long--range attractions. This feature of long--range repulsions was
previously unknown and its physical implications are completely open to our
knowledge.

Finally, examples of applications are given in Section \ref{example of
application section} and we conclude this chapter with a discussion on the
Bogoliubov approximation and the approximating Hamiltonian method in Section %
\ref{Concluding remark}.

\section{Fermi systems with long--range interactions\label{definition models}%
}

\setcounter{equation}{0}%
Let $(\mathcal{A},\mathfrak{A},\mathfrak{a})$ be a \emph{separable} measure
space%
\index{Separable measure space} with $\mathfrak{A}$ and $\mathfrak{a}:%
\mathfrak{A}\rightarrow \mathbb{R}_{0}^{+}$ being respectively some $\sigma $%
--algebra on $\mathcal{A}$\ and some measure on $\mathfrak{A}$. The
separability of $(\mathcal{A},\mathfrak{A},\mathfrak{a})$ means, by
definition, that the space $L^{2}(\mathcal{A},\mathbb{C}):=L^{2}(\mathcal{A},%
\mathfrak{a},\mathbb{C})$ of square integrable complex valued functions on $%
\mathcal{A}$ is a separable Hilbert space. This property is assumed here
because, by Theorem \ref{Metrizability} together with Banach--Alaoglu
theorem, it yields the metrizability of the weak topology on any
norm--bounded subset $B\subseteq L^{2}(\mathcal{A},\mathbb{C})$, which is a
useful property in the sequel.

Then, as $\mathcal{W}_{1}$ is a Banach space (Definition \ref{definition
banach space interaction}), we can follow the construction done in Section %
\ref{Section Preliminaries} with $\mathcal{X}=\mathcal{W}_{1}$ to define the
space $\mathcal{L}^{2}\left( \mathcal{A},\mathcal{W}_{1}\right) $ of $%
\mathcal{L}^{2}$--interactions which in turn is used to define models with
long interactions as follows:

\begin{definition}[Banach space $\mathcal{M}_{1}$ of long--range models]
\label{definition M1bis}\mbox{ }\newline
\index{Long--range models!Banach space}The set of long--range models is
given by 
\begin{equation*}
\mathcal{M}_{1}:=\mathcal{W}_{1}\times \mathcal{L}^{2}\left( \mathcal{A},%
\mathcal{W}_{1}\right) \times \mathcal{L}^{2}\left( \mathcal{A},\mathcal{W}%
_{1}\right)
\end{equation*}%
and is equipped with the semi--norm 
\begin{equation*}
\Vert \mathfrak{m}\Vert _{\mathcal{M}_{1}}=\Vert \Phi \Vert _{\mathcal{W}%
_{1}}+\Vert \Phi _{a}\Vert _{2}+\Vert \Phi _{a}^{\prime }\Vert _{2}
\end{equation*}%
for\ any $\mathfrak{m}:=(\Phi ,\{\Phi _{a}\}_{a\in \mathcal{A}},\{\Phi
_{a}^{\prime }\}_{a\in \mathcal{A}})\in \mathcal{M}_{1}$. We identify in $%
\mathcal{M}_{1}$ models $\mathfrak{m}_{1}$ and $\mathfrak{m}_{2}$ whenever $%
\Vert \mathfrak{m}_{1}-\mathfrak{m}_{2}\Vert _{\mathcal{M}_{1}}=0$, i.e.,
whenever $\mathfrak{m}_{1}$ and $\mathfrak{m}_{2}$ belong to the same
equivalence class of models. For convenience, we ignore the distinction
between models and their equivalence classes and see $\mathcal{M}_{1}$ as a
Banach space of long--range models with norm $\Vert \cdot \Vert _{\mathcal{M}%
_{1}}$.
\end{definition}

\begin{notation}[Models]
\label{Notation6}\mbox{ }\newline
\index{Long--range models!notation}The symbol $\mathfrak{m}$ is exclusively
reserved to denote elements of $\mathcal{M}_{1}$.
\end{notation}

An important sub--space of $\mathcal{M}_{1}$ is the set $\mathcal{M}_{1}^{%
\mathrm{f}}$ of finite range models defined as follows:%
\begin{equation*}
\mathfrak{m}:=(\Phi ,\{\Phi _{a}\}_{a\in \mathcal{A}},\{\Phi _{a}^{\prime
}\}_{a\in \mathcal{A}})\in \mathcal{M}_{1}
\end{equation*}%
has \emph{finite range}%
\index{Long--range models!finite range} iff $\Phi $ is finite range and $%
\{\Phi _{a}\}_{a\in \mathcal{A}},\{\Phi _{a}^{\prime }\}_{a\in \mathcal{A}}$
are finite range almost everywhere (a.e.). The sub--space $\mathcal{M}_{1}^{%
\mathrm{f}}$ of all finite range models is dense in $\mathcal{M}_{1}$
because of Lebesgue's dominated convergence theorem and the density of set $%
\mathcal{W}_{1}^{\mathrm{f}}$ of all finite range t.i. interactions in $%
\mathcal{W}_{1}$. Another dense\footnote{%
This follows from the density of step functions in $\mathcal{L}^{2}\left( 
\mathcal{A},\mathcal{W}_{1}\right) $.} sub--space of $\mathcal{M}_{1}$ is
given by the set $\mathcal{M}_{1}^{\mathrm{d}}$ of \emph{discrete}%
\index{Long--range models!discrete} elements $\mathfrak{m}$, i.e., elements
for which the set 
\begin{equation*}
\left\{ \Phi _{a}:a\in \mathcal{A}\right\} \cup \left\{ \Phi _{a}^{\prime
}:a\in \mathcal{A}\right\}
\end{equation*}%
has a finite number of interactions. Therefore, the sub--space $\mathcal{M}%
_{1}^{\mathrm{df}}:=\mathcal{M}_{1}^{\mathrm{d}}\cap \mathcal{M}_{1}^{%
\mathrm{f}}$ is also clearly dense in $\mathcal{M}_{1}$. It is an important
dense sub--space used to prove Theorem \ref{BCS main theorem 1} in Chapter %
\ref{section proof of theorem main}.

Like t.i. local interactions $\Phi \in \mathcal{W}_{1}$ (cf. Definition \ref%
{definition standard interaction}), any long--range model $\mathfrak{m}\in 
\mathcal{M}_{1}$ is associated with a family of internal energies as follows:

\begin{definition}[Internal energy with long--range interactions]
\label{definition BCS-type model}\mbox{ }\newline
\index{Long--range models!internal energy}For any $\mathfrak{m}\in \mathcal{M%
}_{1}$ and $l\in \mathbb{N}$, its internal energy in the box $\Lambda _{l}$
is defined by%
\begin{equation*}
U_{l}:=U_{\Lambda _{l}}^{\Phi }+%
\frac{1}{|\Lambda _{l}|}\int_{\mathcal{A}}\gamma _{a}(U_{\Lambda _{l}}^{\Phi
_{a}}+iU_{\Lambda _{l}}^{\Phi _{a}^{\prime }})^{\ast }(U_{\Lambda
_{l}}^{\Phi _{a}}+iU_{\Lambda _{l}}^{\Phi _{a}^{\prime }})\mathrm{d}%
\mathfrak{a}\left( a\right) ,
\end{equation*}%
with $\gamma _{a}\in \{-1,1\}$ being a fixed measurable function.
\end{definition}

\noindent The internal energy $U_{l}$ is well--defined. Indeed, by
continuity of the linear map $\Phi \mapsto U_{\Lambda _{l}}^{\Phi }$, for
any $\mathfrak{m}\in \mathcal{M}_{1}$, the map $a\mapsto \gamma
_{a}U_{\Lambda _{l}}^{\Phi _{a}}$ from $\mathcal{A}$ to $\mathcal{U}%
_{\Lambda _{l}}$ belongs to $\mathcal{L}^{2}\left( \mathcal{A},\mathcal{U}%
_{\Lambda _{l}}\right) $ (see Section \ref{Section Preliminaries} for the
definition of the space $\mathcal{L}^{p}\left( \mathcal{A},\mathcal{U}%
_{\Lambda _{l}}\right) $). Then, as $\mathcal{U}$ is a $C^{\ast }$--algebra,
the map 
\begin{equation*}
a\mapsto \gamma _{a}(U_{\Lambda _{l}}^{\Phi _{a}}+iU_{\Lambda _{l}}^{\Phi
_{a}^{\prime }})^{\ast }(U_{\Lambda _{l}}^{\Phi _{a}}+iU_{\Lambda
_{l}}^{\Phi _{a}^{\prime }})
\end{equation*}%
belongs to the space $\mathcal{L}^{1}\left( \mathcal{A},\mathcal{U}_{\Lambda
_{l}}\right) $ and $\mathfrak{m}\mapsto U_{l}$ is a well--defined functional
from the Banach space $\mathcal{M}_{1}$ to the $C^{\ast }$--algebra $%
\mathcal{U}$. By (\ref{bound BCS norms}), this map is even continuous w.r.t.
the norms of $\mathcal{M}_{1}$ and $\mathcal{U}$.

The long--range character of Fermi models $\mathfrak{m}\in \mathcal{M}_{1}$
with local internal energy $U_{l}$ -- as compared to the usual models
defined from local interactions $\Phi \in \mathcal{W}_{1}$ only -- can be
seen as follows. For each fixed $\epsilon \in (0,1)$, we define the
long--range truncation of the internal energy $U_{\Lambda _{l}}^{\Phi }$
(Definition \ref{definition standard interaction}) associated with the local
part $\Phi $ of $\mathfrak{m}$ by%
\begin{equation*}
U_{l,\epsilon }^{\Phi }:=\sum\limits_{\Lambda \in \mathcal{P}_{f}(\Lambda
_{l}),\;\text{{\o }}(\Lambda )>\epsilon l}\Phi _{\Lambda },
\end{equation*}%
where the function ${\o }(\Lambda )$ is the diameter of $\Lambda \in 
\mathcal{P}_{f}(\mathfrak{L})$, see (\ref{diameter of lambda}). Analogously,
the long--range truncation of the internal energy $(U_{l}-U_{\Lambda
_{l}}^{\Phi })$ associated with the long--range part of $\mathfrak{m}$ is by
definition equal to%
\index{Long--range models!internal energy!long--range}%
\begin{equation*}
U_{l,\epsilon }:=%
\frac{1}{|\Lambda _{l}|}\sum\limits_{\Lambda ,\Lambda ^{\prime }\in \mathcal{%
P}_{f}(\Lambda _{l}),\text{{\ \o }}(\Lambda \cup \Lambda ^{\prime
})>\epsilon l}\int_{\mathcal{A}}\gamma _{a}(\Phi _{a,\Lambda }+i\Phi
_{a,\Lambda }^{\prime })^{\ast }(\Phi _{a,\Lambda ^{\prime }}+i\Phi
_{a,\Lambda ^{\prime }}^{\prime })\mathrm{d}\mathfrak{a}\left( a\right) .
\end{equation*}%
Then, because $\Phi \in \mathcal{W}_{1}$, one can generally check for any $%
\epsilon \in (0,1)$ that%
\begin{equation*}
\lim\limits_{l\rightarrow \infty }\frac{\Vert U_{l,\epsilon }^{\Phi }\Vert }{%
\Vert U_{l,\epsilon }\Vert }=0
\end{equation*}%
provided that $\mathfrak{m}\neq (\Phi ,0,0)$. In other words, the
long--range part $(U_{l}-U_{\Lambda _{l}}^{\Phi })$ of the internal energy $%
U_{l}$ generally dominates the interaction at long distances for large $l\in 
\mathbb{N}$.

The aim of the monograph is the study of the thermodynamic behavior of any
models $\mathfrak{m}\in \mathcal{M}_{1}$ with long--range interactions. In
the thermodynamic limit, long--range interactions act completely differently
depending whether they are positive long--range interactions, i.e., \emph{%
long--range repulsions}, or negative long--range interactions, i.e., \emph{%
long--range attractions}. These two types of long--range interactions are
defined via the negative and positive parts%
\begin{equation}
\gamma _{a,\pm }:=1/2(|\gamma _{a}|\pm \gamma _{a})\in \{0,1\}
\label{remark positive negative part gamma}
\end{equation}%
of the fixed measurable function 
\begin{equation*}
\gamma _{a}=\gamma _{a,+}-\gamma _{a,-}\in \{-1,1\}
\end{equation*}%
as follows:

\begin{definition}[Long--range attractions and repulsions]
\label{long range attraction-repulsion}\mbox{ }\newline
\emph{(}$-$\emph{)} 
\index{Long--range models!attractions}%
\index{Interaction!long--range|textbf}The long--range attractions of any $%
\mathfrak{m}\in \mathcal{M}_{1}$ are the $\mathcal{L}^{2}$--interactions%
\begin{equation*}
\{\Phi _{a,-}:=\gamma _{a,-}\Phi _{a}\}_{a\in \mathcal{A}}\in \mathcal{L}%
^{2}\left( \mathcal{A},\mathcal{W}_{1}\right) \quad 
\text{and}\quad \{\Phi _{a,-}^{\prime }:=\gamma _{a,-}\Phi _{a}^{\prime
}\}_{a\in \mathcal{A}}\in \mathcal{L}^{2}\left( \mathcal{A},\mathcal{W}%
_{1}\right) .
\end{equation*}%
\emph{(}$+$\emph{)} 
\index{Long--range models!repulsions}The long--range repulsions of any $%
\mathfrak{m}\in \mathcal{M}_{1}$ are the $\mathcal{L}^{2}$--interactions%
\begin{equation*}
\{\Phi _{a,+}:=\gamma _{a,+}\Phi _{a}\}_{a\in \mathcal{A}}\in \mathcal{L}%
^{2}\left( \mathcal{A},\mathcal{W}_{1}\right) \quad 
\text{and}\quad \{\Phi _{a,+}^{\prime }:=\gamma _{a,+}\Phi _{a}^{\prime
}\}_{a\in \mathcal{A}}\in \mathcal{L}^{2}\left( \mathcal{A},\mathcal{W}%
_{1}\right) .
\end{equation*}
\end{definition}

It is important to observe that our class of models $\mathfrak{m}\in 
\mathcal{M}_{1}$ includes Fermi systems%
\index{Long--range models} 
\begin{equation*}
(\Phi ,\{\Phi _{a}^{1}\}_{a\in \mathcal{A}},\{\Phi _{a}^{2}\}_{a\in \mathcal{%
A}},\{\Phi _{a}^{3}\}_{a\in \mathcal{A}},\{\Phi _{a}^{4}\}_{a\in \mathcal{A}%
})\in \mathcal{M}_{1}\times \mathcal{L}^{2}\left( \mathcal{A},\mathcal{W}%
_{1}\right) \times \mathcal{L}^{2}\left( \mathcal{A},\mathcal{W}_{1}\right)
\end{equation*}%
with internal energies of the type%
\index{Long--range models!internal energy}%
\begin{equation*}
V_{l}:=U_{\Lambda _{l}}^{\Phi }+%
\frac{1}{|\Lambda _{l}|}\int_{\mathcal{A}}(U_{\Lambda _{l}}^{\Phi
_{a}^{1}}+iU_{\Lambda _{l}}^{\Phi _{a}^{2}})^{\ast }(U_{\Lambda _{l}}^{\Phi
_{a}^{3}}+iU_{\Lambda _{l}}^{\Phi _{a}^{4}})\mathrm{d}\mathfrak{a}\left(
a\right) +\mathrm{h.c.}
\end{equation*}%
because 
\begin{equation}
2\left( A^{\ast }B+B^{\ast }A\right) =\left( A+B\right) ^{\ast }\left(
A+B\right) -\left( A-B\right) ^{\ast }\left( A-B\right) .
\label{observing truc}
\end{equation}%
In other words, such Fermi systems correspond to models $\mathfrak{m}\in 
\mathcal{M}_{1}$ with long--range attractions and repulsions together.

\section{Examples of Applications\label{example of application section}}

Long range models are defined in a rather abstract way within Section \ref%
{definition models}. Therefore, before going further, we give here some
concrete examples of long range models used in theoretical physics as well
as a possible generalization in Section \ref{Inhomogeneous sect}. We also
express the main consequences of our results, which will be formulated in
the general case later on in Sections \ref{existence of thermodynamics}--\ref%
{Concluding remark}.

The most general form of a translation invariant model for fermions in a
cubic box $\Lambda _{l}\subseteq \mathfrak{L}:=\mathbb{Z}^{d}$ with a
quartic (in the creation and annihilation operators) gauge invariant t.i.
interaction and spin set $\mathrm{S}$ is formally equal to 
\begin{eqnarray*}
H:= &&\underset{x,y\in \Lambda _{l},\ \mathrm{s}\in \mathrm{S}}{\sum }%
h\left( x-y\right) a_{x,\mathrm{s}}^{\ast }a_{y,\mathrm{s}} \\
&&+\underset{\mathrm{s}_{1},\mathrm{s}_{2},\mathrm{s}_{3},\mathrm{s}_{4}\in 
\mathrm{S}}{\underset{x,y,z,w\in \Lambda _{l}\ }{\sum }}v_{\mathrm{s}_{1},%
\mathrm{s}_{2},\mathrm{s}_{3},\mathrm{s}_{4}}\left( y-x,z-x,w-x\right) a_{x,%
\mathrm{s}_{1}}^{\ast }a_{y,\mathrm{s}_{2}}^{\ast }a_{z,\mathrm{s}_{3}}a_{w,%
\mathrm{s}_{4}}\ .
\end{eqnarray*}%
As an example, the spin set equals $\mathrm{S}=\left\{ \uparrow ,\downarrow
\right\} $ for electrons. In momentum space, the above Hamiltonian reads 
\begin{eqnarray}
H &=&\underset{k\in \Lambda _{l}^{\ast },\ \mathrm{s}\in \mathrm{S}}{\sum }%
\hat{h}\left( k\right) \hat{a}_{k}^{\ast }\hat{a}_{k}  \notag \\
&&+\frac{1}{\left\vert \Lambda _{l}\right\vert }\underset{\mathrm{s}_{1},%
\mathrm{s}_{2},\mathrm{s}_{3},\mathrm{s}_{4}\in \mathrm{S}}{\underset{%
k,k^{\prime },q\in \Lambda _{l}^{\ast }}{\sum }}\hat{g}_{\mathrm{s}_{1},%
\mathrm{s}_{2},\mathrm{s}_{3},\mathrm{s}_{4}}\left( k,k^{\prime },q\right) 
\hat{a}_{k+q,\mathrm{s}_{1}}^{\ast }\hat{a}_{k^{\prime }-q,\mathrm{s}%
_{2}}^{\ast }\hat{a}_{k^{\prime },\mathrm{s}_{3}}\hat{a}_{k,\mathrm{s}_{4}}\
,  \label{hamil general}
\end{eqnarray}%
see \cite[Eq. (2.1)]{Metzner}. Here, 
\begin{equation*}
\Lambda _{l}^{\ast }:=\frac{2\pi }{(2l+1)}\Lambda _{l}\subseteq \left[ -\pi
,\pi \right] ^{d}
\end{equation*}%
is the reciprocal lattice of quasi--momenta (periodic boundary conditions)
and the operator(s) 
\begin{equation*}
\hat{a}_{k,\mathrm{s}}^{\ast }:=\frac{1}{\left\vert \Lambda _{l}\right\vert
^{1/2}}\underset{x\in \Lambda _{l}}{\sum }\mathrm{e}^{-ik\cdot x}a_{x,%
\mathrm{s}}^{\ast }\ ,\qquad \hat{a}_{k,\mathrm{s}}:=\frac{1}{\left\vert
\Lambda _{l}\right\vert ^{1/2}}\underset{x\in \Lambda _{l}}{\sum }\mathrm{e}%
^{ik\cdot x}a_{x,\mathrm{s}}\ ,
\end{equation*}%
creates (resp. annihilates) a fermion with spin $\mathrm{s}\in \mathrm{S}$
and (quasi--) momentum $k\in \Lambda _{l}^{\ast }$. In the interaction part
of (\ref{hamil general}), $k$ and $k^{\prime }$ are physically interpreted
as being the momenta of two incoming particles which interact and exchange a
(quasi--) momentum $q$.

The thermodynamics of the model $H$ is highly non--trivial, in general. In
theoretical physics, one is forced to perform different kinds of
approximations or Ans\"{a}tze to extract physical properties. Many of them
lead to long--range models in the sense of Definition \ref{definition M1bis}
and our method provides rigorous results on these. As a first example, we
start with the so--called \emph{forward scattering approximation}.

\subsection{The forward scattering approximation}

\index{Forward scattering approximation}In many physical situations, forward
processes, i.e., interactions with a very small momentum exchange $q$, are
dominating, see, e.g., \cite[Section 5]{Metzner}. They are for instance
relevant for the physics of high--T$_{c}$ superconductors, see, e.g., \cite%
{Yamase}.

This case is modeled by considering a coupling function $%
\hat{g}_{\mathrm{s}_{1},\mathrm{s}_{2},\mathrm{s}_{3},\mathrm{s}_{4}}\left(
k,k^{\prime },q\right) $ concentrated around $q=0$. As a consequence, one
can consider the Hamiltonian 
\begin{equation*}
H^{F}:=\underset{k\in \Lambda _{l}^{\ast },\ \mathrm{s}\in \mathrm{S}}{\sum }%
\hat{h}\left( k\right) \hat{a}_{k,\mathrm{s}}^{\ast }\hat{a}_{k,\mathrm{s}}+%
\frac{1}{\left\vert \Lambda _{l}\right\vert }\underset{\mathrm{s}_{1},%
\mathrm{s}_{2},\mathrm{s}_{3},\mathrm{s}_{4}\in \mathrm{S}}{\underset{%
k,k^{\prime }\in \Lambda _{l}^{\ast }}{\sum }}\hat{g}_{\mathrm{s}_{1},%
\mathrm{s}_{2},\mathrm{s}_{3},\mathrm{s}_{4}}^{F}\left( k,k^{\prime }\right) 
\hat{a}_{k,\mathrm{s}_{1}}^{\ast }\hat{a}_{k^{\prime },\mathrm{s}_{2}}^{\ast
}\hat{a}_{k^{\prime },\mathrm{s}_{3}}\hat{a}_{k,\mathrm{s}_{4}}
\end{equation*}%
with 
\begin{equation*}
\hat{g}_{\mathrm{s}_{1},\mathrm{s}_{2},\mathrm{s}_{3},\mathrm{s}%
_{4}}^{F}\left( k,k^{\prime }\right) :=\int_{\left[ -\pi ,\pi \right] ^{d}}%
\hat{g}_{\mathrm{s}_{1},\mathrm{s}_{2},\mathrm{s}_{3},\mathrm{s}_{4}}\left(
k,k^{\prime },q\right) \mathrm{d}^{d}q\ .
\end{equation*}%
For instance, this form exactly corresponds to the interaction term in \cite[%
Eq. (3)]{Yamase}. We assume now that $\hat{g}_{\mathrm{s}_{1},\mathrm{s}_{2},%
\mathrm{s}_{3},\mathrm{s}_{4}}^{F}$ is a real--valued continuous and
symmetric function represented in the form 
\begin{equation}
\hat{g}_{\mathrm{s}_{1},\mathrm{s}_{2},\mathrm{s}_{3},\mathrm{s}%
_{4}}^{F}\left( k,k^{\prime }\right) =\delta _{\mathrm{s}_{1},\mathrm{s}%
_{4}}\delta _{\mathrm{s}_{2},\mathrm{s}_{3}}\left( \int_{\mathbb{R}^{+}}\hat{%
f}_{a,+}\left( k\right) \hat{f}_{a,+}\left( k^{\prime }\right) \mathrm{d}%
a-\int_{\mathbb{R}^{-}}\hat{f}_{a,-}\left( k\right) \hat{f}_{a,-}\left(
k^{\prime }\right) \mathrm{d}a\right) .  \label{product assumptionbis}
\end{equation}%
Here, $\hat{f}_{a,\pm }\left( k\right) $ is a real--valued continuous
function and for each $k\in \left[ -\pi ,\pi \right] ^{D}$, the two
functions 
\begin{equation*}
a\mapsto \hat{f}_{a,\pm }\left( k\right)
\end{equation*}%
belong to $L^{2}\left( \mathbb{R}^{\pm }\right) $. The above explicit
dependency of $\hat{g}_{\mathrm{s}_{1},\mathrm{s}_{2},\mathrm{s}_{3},\mathrm{%
s}_{4}}^{F}$ w.r.t. $\mathrm{s}_{1},\mathrm{s}_{2},\mathrm{s}_{3},\mathrm{s}%
_{4}\in \mathrm{S}$ is only chosen for simplicity. A general spin dependency
can also be treated by observing (\ref{observing truc}). Note also that
continuous and symmetric functions $p(k,k^{\prime })$ on $\left[ -\pi ,\pi %
\right] ^{D}\times \left[ -\pi ,\pi \right] ^{D}$ can be arbitrarily well
approximated by sums of products of the form $t\left( k\right) t\left(
k^{\prime }\right) $. Observe that the choice in \cite[Eq. (4)]{Yamase} is a
special case of (\ref{product assumptionbis}).

With this choice of functions, we write the Hamiltonian $H^{F}$ back in the $%
x$--space and get%
\begin{multline*}
H^{F}:=\underset{x,y\in \Lambda _{l},\ \mathrm{s}\in \mathrm{S}}{\sum }%
h^{\prime }\left( x-y\right) a_{x,\mathrm{s}}^{\ast }a_{y,\mathrm{s}} \\
+\frac{1}{\left\vert \Lambda _{l}\right\vert }\int_{\mathbb{R}^{+}}\left( 
\underset{x,y\in \Lambda _{l},\ \mathrm{s}\in \mathrm{S}}{\sum }%
f_{a,+}\left( x-y\right) a_{x,\mathrm{s}}^{\ast }a_{y,\mathrm{s}}\right)
\left( \underset{x,y\in \Lambda _{l},\ \mathrm{s}\in \mathrm{S}}{\sum }%
f_{a,+}\left( x-y\right) a_{x,\mathrm{s}}^{\ast }a_{y,\mathrm{s}}\right) 
\mathrm{d}a \\
-\frac{1}{\left\vert \Lambda _{l}\right\vert }\int_{\mathbb{R}^{-}}\left( 
\underset{x,y\in \Lambda _{l},\ \mathrm{s}\in \mathrm{S}}{\sum }%
f_{a,-}\left( x-y\right) a_{x,\mathrm{s}}^{\ast }a_{y,\mathrm{s}}\right)
\left( \underset{x,y\in \Lambda _{l},\ \mathrm{s}\in \mathrm{S}}{\sum }%
f_{a,-}\left( x-y\right) a_{x,\mathrm{s}}^{\ast }a_{y,\mathrm{s}}\right) 
\mathrm{d}a\ .
\end{multline*}%
The hopping term $h$ is replaced by $h^{\prime }$ because of commutators
used to rearrange the quartic terms. By self-adjointness of $H^{F}$, $%
h^{\prime }$ must be a symmetric function. This Hamiltonian corresponds to a
model 
\begin{equation*}
\mathfrak{m}^{F}:=(\Phi ^{F},\{\Phi _{a}^{F}\}_{a\in \mathcal{A}},\{\Phi
_{a}^{F\prime }\}_{a\in \mathcal{A}})\ ,
\end{equation*}%
by setting $\mathcal{A}=\mathbb{R}$, $\gamma _{a}=\mathbf{1}\left[ a\in 
\mathbb{R}^{+}\right] -\mathbf{1}\left[ a\in \mathbb{R}^{-}\right] $, $%
\mathrm{d}\mathfrak{a}\left( a\right) =\mathrm{d}a$. More precisely, $%
\mathfrak{m}^{F}$ is defined as follows: For all finite subsets $\Lambda
\subseteq \mathfrak{L}$ (i.e., $\Lambda \in \mathcal{P}_{f}(\mathfrak{L})$), 
\begin{eqnarray*}
\Phi _{\Lambda }^{F}:= &&\frac{1}{1+\delta _{x,y}}\underset{\mathrm{s}\in 
\mathrm{S}}{\sum }\left( h^{\prime }\left( x-y\right) a_{x,\mathrm{s}}^{\ast
}a_{y,\mathrm{s}}+h^{\prime }\left( y-x\right) a_{y,\mathrm{s}}^{\ast }a_{x,%
\mathrm{s}}\right) \\
\Phi _{a,\Lambda }^{F}:= &&\frac{1}{1+\delta _{x,y}}\underset{\mathrm{s}\in 
\mathrm{S}}{\sum }\func{Re}\left( f_{a}\left( x-y\right) a_{x,\mathrm{s}%
}^{\ast }a_{y,\mathrm{s}}+f_{a}\left( y-x\right) a_{y,\mathrm{s}}^{\ast
}a_{x,\mathrm{s}}\right) \\
\Phi _{a,\Lambda }^{F\prime }:= &&\frac{1}{1+\delta _{x,y}}\underset{\mathrm{%
s}\in \mathrm{S}}{\sum }\func{Im}\left( f_{a}\left( x-y\right) a_{x,\mathrm{s%
}}^{\ast }a_{y,\mathrm{s}}+f_{a}\left( y-x\right) a_{y,\mathrm{s}}^{\ast
}a_{x,\mathrm{s}}\right)
\end{eqnarray*}%
whenever $\Lambda =\left\{ x,y\right\} $, and $\Phi _{\Lambda }^{F}=\Phi
_{a,\Lambda }^{F}=\Phi _{a,\Lambda }^{F\prime }=0$ otherwise. Here, 
\begin{equation*}
f_{a}:=f_{a,-}+f_{a,+},\qquad a\in \mathbb{R}\ .
\end{equation*}%
To ensure that $\mathfrak{m}^{F}\in \mathcal{M}_{1}$\ we impose at this
point that 
\begin{equation*}
\left\Vert \Phi ^{F}\right\Vert \leq \left\vert \mathrm{S}\right\vert 
\underset{x\in \mathfrak{L}}{\sum }\left\vert h^{\prime }\left( x\right)
\right\vert <\infty
\end{equation*}%
and 
\begin{equation*}
\left\Vert \Phi _{a}^{F}\right\Vert _{2}^{2}+\left\Vert \Phi _{a}^{F\prime
}\right\Vert _{2}^{2}\leq \left\vert \mathrm{S}\right\vert ^{2}\int_{\mathbb{%
R}^{+}}\left( \underset{x\in \mathfrak{L}}{\sum }\left\vert f_{a}\left(
x\right) \right\vert \right) ^{2}\mathrm{d}a<\infty \ .
\end{equation*}

Therefore, we infer from Theorem \ref{theorem saddle point} ($\sharp $) that
the infinite--volume pressure $\mathrm{P}_{\mathfrak{m}^{F}}^{\sharp }$
equals $-\mathrm{F}_{\mathfrak{m}^{F}}^{\sharp }$, where%
\begin{eqnarray*}
\mathrm{F}_{\mathfrak{m}^{F}}^{\sharp } &=&\underset{c_{a,-}\in L^{2}(%
\mathbb{R}^{-})}{\inf }\ \underset{c_{a,+}\in L^{2}(\mathbb{R}^{+})}{\sup }%
\left\{ -\int_{\mathbb{R}^{+}}\left\vert c_{a,+}\right\vert ^{2}\mathrm{d}%
a\right. \\
&&\left. +\int_{\mathbb{R}^{-}}\left\vert c_{a,-}\right\vert ^{2}\mathrm{d}%
a-P_{\mathfrak{m}^{F}}\left( c_{a,-}+c_{a,+}\right) \right\} .
\end{eqnarray*}%
Here, $P_{\mathfrak{m}^{F}}\left( c_{a}\right) $ is the (explicit) pressure
of a \emph{free Fermi gas} with hopping matrix%
\begin{equation*}
\mathrm{h}_{c}\left( x-y\right) :=h^{\prime }\left( x-y\right) +\frac{1}{2}%
\int_{\mathbb{R}}\gamma _{a}\left( c_{a}f_{a}\left( x-y\right) +\bar{c}_{a}%
\bar{f}_{a}\left( y-x\right) \right) \mathrm{d}a
\end{equation*}%
for any $c_{a}\in L^{2}(\mathbb{R})$. From Theorem \ref{theorem structure of
omega} (ii), the generalized equilibrium states are convex combinations of $%
U\left( 1\right) $--invariant quasi--free states and thus, none of them can
break this gauge symmetry and even show superconducting ODLRO. By Theorem %
\ref{lemma limit gibbs states periodic}, observe that weak$^{\ast }$%
--accumulation points of Gibbs states associated with $H^{F}$ and periodic
boundary conditions are particular cases of generalized equilibrium states
of $\mathfrak{m}^{F}$. We have in particular access to \emph{all correlation
functions} of this model in the thermodynamic limit.

Previous results in theoretical physics on the forward scattering
interaction are based on diagrammatic methods \cite{Metzner}, bosonization 
\cite{Haldane,Kopietz} and others. To our knowledge, there is no rigorous
result on the level of the pressure, even for the Hamiltonian $H^{F}$.
Observe that the rigorous methods of \cite%
{RaggioWerner1,RaggioWerner2,BCSrigorous1} may work for this model, but
would yield a more complicated variational problem for the pressure.
Moreover, these technics do not solve the problem of (generalized)
equilibrium states and thermodynamic limit of Gibbs states.

\subsection{The BCS approximation}

\index{BCS model}A second, but more \textquotedblleft
classical\textquotedblright\ application is the celebrated \emph{BCS model} 
\cite{BCS1,BCS2,BCS3}. Indeed, it is defined by (\ref{hamil general}) with $%
k^{\prime }=-k$ and $\mathrm{S}=\left\{ \uparrow ,\downarrow \right\} $,
that is, 
\begin{equation*}
H^{BCS}:=\underset{k\in \Lambda _{l}^{\ast },\ \mathrm{s}\in \mathrm{S}}{%
\sum }%
\hat{h}\left( k\right) a_{k,\mathrm{s}}^{\ast }a_{k,\mathrm{s}}+\frac{1}{%
\left\vert \Lambda _{l}\right\vert }\underset{k,p\in \Lambda _{l}^{\ast }}{%
\sum }\hat{g}^{BCS}\left( k,p\right) a_{p,\uparrow }^{\ast }a_{-p,\downarrow
}^{\ast }a_{k,\downarrow }a_{-k,\uparrow }\ ,
\end{equation*}%
setting $p:=k+q$. Using similar assumptions as before, this Hamiltonian
corresponds again to a model $\mathfrak{m}^{BCS}\in \mathcal{M}_{1}$, where%
\begin{eqnarray*}
\Phi _{\Lambda }^{BCS} &:=&\frac{1}{1+\delta _{x,y}}\underset{\mathrm{s}\in 
\mathrm{S}}{\sum }\left( h\left( x-y\right) a_{x,\mathrm{s}}^{\ast }a_{y,%
\mathrm{s}}+h\left( y-x\right) a_{y,\mathrm{s}}^{\ast }a_{x,\mathrm{s}%
}\right) \\
\Phi _{a,\Lambda }^{BCS} &:=&\func{Re}\left( \left( \tilde{f}_{a}\left(
x-y\right) -\tilde{f}_{a}\left( y-x\right) \right) a_{x,\downarrow
}a_{y,\uparrow }\right) \\
\Phi _{a,\Lambda }^{BCS\prime } &:=&\func{Im}\left( \left( \tilde{f}%
_{a}\left( x-y\right) -\tilde{f}_{a}\left( y-x\right) \right)
a_{x,\downarrow }a_{y,\uparrow }\right)
\end{eqnarray*}%
whenever $\Lambda =\left\{ x,y\right\} $, and $\Phi _{\Lambda }^{BCS}=\Phi
_{a,\Lambda }^{BCS}=\Phi _{a,\Lambda }^{BCS\prime }=0$ otherwise.

The approximating interactions are quadratic in the annihilation and
creation operators. Therefore, the pressure $P_{\mathfrak{m}^{F}}\left(
c_{a}\right) $ can explicitly be computed for any $c_{a}\in L^{2}(\mathbb{R}%
) $. By Theorem \ref{theorem saddle point} ($\sharp $), we get the
infinite--volume pressure $\mathrm{P}_{\mathfrak{m}^{BCS}}^{\sharp }$ via a
variational problem $\mathrm{F}_{\mathfrak{m}^{BCS}}^{\sharp }$. Note that
the rigorous analysis of the infinite--volume pressure was already
rigorously performed in this special case in the eighties \cite%
{RaggioWerner1,RaggioWerner2,BCSrigorous1}, but the resulting variational
problem is technically more difficult to study than $\mathrm{F}_{\mathfrak{m}%
^{BCS}}^{\sharp }$, in general.

Moreover, in contrast to \cite{RaggioWerner1,RaggioWerner2,BCSrigorous1}, by
Theorem \ref{theorem structure of omega} (ii) we also obtain all generalized
equilibrium states which, by Theorem \ref{lemma limit gibbs states periodic}%
, give access to \emph{all correlation functions} of this model with
periodic boundary conditions in the thermodynamic limit. We can in
particular rigorously verify the existence of ODLRO for such models.

\subsection{The forward scattering--BCS approximation}

\index{BCS model}%
\index{Forward scattering approximation}Note that we can also combine the
BCS and the forward scattering interactions to study the competition between
the Cooper and forward scattering channels. This is exactly what is done for
a special case of coupling functions in \cite{Yamase}. Indeed, the resulting
model $\mathfrak{m}^{F-BCS}$ still belongs to $\mathcal{M}_{1}$ and the
associated approximating interaction is again quadratic in the annihilation
and creation operators. Hence, $P_{\mathfrak{m}^{F-BCS}}\left( c_{a}\right) $
can explicitly be computed for any $c_{a}\in L^{2}(\mathbb{R})$ and we can
have access to all correlation functions as above. In particular, we can
rigorously justify the approach of \cite{Yamase} (mean--field approximation,
gap equations, etc) even on the level of states. Note that the resulting
variational problem $\mathrm{F}_{\mathfrak{m}^{F-BCS}}^{\sharp }$ can then
be treated in a rigorous way by numerical methods, see, e.g., \cite{Yamase}.

\subsection{Inhomogeneous Hubbard--type interactions\label{Inhomogeneous
sect}}

\index{Hubbard--type interactions}To conclude, our results can directly be
extended to more general situations where the range of the two--particle
interaction is macroscopic, but very small as compared to the side--length $%
(2l+1)$ of the cubic box $\Lambda _{l}$. A prototype of such models is given
by a Hamiltonian of Hubbard--type 
\begin{equation*}
H^{HT}:=\underset{x,y\in \Lambda _{l},\ \mathrm{s}\in \mathrm{S}}{\sum }%
h\left( x-y\right) a_{x,\mathrm{s}}^{\ast }a_{y,\mathrm{s}}+%
\frac{1}{\left\vert \Lambda _{l}\right\vert }\underset{x,y\in \Lambda _{l}}{%
\sum }v\left( \frac{x}{2l+1},\frac{y}{2l+1}\right) n_{x}n_{y}
\end{equation*}%
for any symmetric continuous function 
\begin{equation*}
v:[-1/2,1/2]^{d}\times \lbrack -1/2,1/2]^{d}\rightarrow \mathbb{R}\ .
\end{equation*}%
Here,%
\begin{equation*}
n_{x}:=\underset{\mathrm{s}\in \mathrm{S}}{\sum }a_{x,\mathrm{s}}^{\ast
}a_{x,\mathrm{s}}
\end{equation*}%
is the density operator at lattice site $x\in \Lambda _{l}$. The particular
case we have in mind would be 
\begin{equation*}
v\left( x,y\right) =\kappa \left( \left\vert x-y\right\vert \right) ,\qquad
x,y\in \lbrack -1/2,1/2]^{d}\ ,
\end{equation*}
for some continuous function $\kappa :\mathbb{R}\rightarrow \mathbb{R}$
concentrated around $0$, but the result is more general. Note additionally
that \emph{neither} the positivity (or negativity) of $v$ \emph{nor} the one
of its Fourier transform is required.

Choosing 
\begin{equation*}
v\left( x,y\right) =\int_{\mathbb{R}^{+}}f_{a,+}\left( x\right)
f_{a,+}\left( y\right) \mathrm{d}a-\int_{\mathbb{R}^{-}}f_{a,-}\left(
x\right) f_{a,-}\left( y\right) \mathrm{d}a
\end{equation*}%
we arrive at the infinite--volume pressure 
\begin{eqnarray*}
\mathrm{P}^{HT} &=&-\underset{c_{a,-}\in L^{2}(\mathbb{R}^{-})}{\inf }\ 
\underset{c_{a,+}\in L^{2}(\mathbb{R}^{+})}{\sup }\left\{ -\int_{\mathbb{R}%
^{+}}\left\vert c_{a,+}\right\vert ^{2}\mathrm{d}a+\int_{\mathbb{R}%
^{-}}\left\vert c_{a,-}\right\vert ^{2}\mathrm{d}a\right. \\
&&\left. -\int_{[-1/2,1/2]^{d}}\mathfrak{p}\left( \zeta
,c_{a,-}+c_{a,+}\right) \mathrm{d}^{d}\zeta \right\} .
\end{eqnarray*}%
with $\mathfrak{p}\left( \zeta ,c_{a}\right) $ being the thermodynamic limit
of the pressure of the \emph{free Fermi gas} with Hamiltonian 
\begin{equation*}
\underset{x,y\in \Lambda _{l},\ \mathrm{s}\in \mathrm{S}}{\sum }h\left(
x-y\right) a_{x,\mathrm{s}}^{\ast }a_{y,\mathrm{s}}+\underset{x\in \Lambda
_{l}}{\sum }n_{x}\int_{\mathbb{R}}\func{Re}\left( \bar{c}_{a}f_{a}\left(
\zeta \right) \right) \mathrm{d}a\ 
\end{equation*}%
for any $c_{a}\in L^{2}(\mathbb{R})$. Because of the absence of space
symmetries in the above model, it is not clear what kind of object \emph{%
generalized equilibrium states} should be, see below Definition \ref%
{definition equilibirum state}. Though, it is possible to study all
correlations functions of the form 
\begin{equation*}
\tilde{\rho}_{l}\left( \alpha _{\left[ (2l+1)x_{1}\right] }\left(
A_{1}\right) \cdots \alpha _{\left[ (2l+1)x_{p}\right] }\left( A_{p}\right)
\right) \ 
\end{equation*}%
for any $A_{1},\ldots ,A_{p}\in \mathcal{U}^{0}$, $x_{1},\ldots ,x_{p}\in
(-1/2,1/2)^{d}$ with $p\in \mathbb{N}$. Here, $\tilde{\rho}_{l}$ is the
Gibbs state associated with $H^{HT}$. These last results are the subject\ of
papers in preparation \cite{BruPedra-homog,BruPedra-homogbis}.

\section{Free--energy densities and existence of thermodynamics\label%
{existence of thermodynamics}}

We now come back to the general situation of Section \ref{definition models}%
. As in the case of local interactions (see, e.g., \cite[Theorem 11.4.]%
{Araki-Moriya}), the analysis of the thermodynamics of long--range Fermi
systems in the grand--canonical ensemble is related to an important
functional associated with any $\vec{\ell}$--periodic state $\rho \in E_{%
\vec{\ell}}$ on $\mathcal{U}$: the \emph{free--energy density functional} $%
f_{\mathfrak{m}}^{\sharp }$ of the long--range model 
\begin{equation*}
\mathfrak{m}:=(\Phi ,\{\Phi _{a}\}_{a\in \mathcal{A}},\{\Phi _{a}^{\prime
}\}_{a\in \mathcal{A}})\in \mathcal{M}_{1}.
\end{equation*}%
This functional is the sum of the local free--energy density functional $%
f_{\Phi }$ (Definition \ref{Remark free energy density}) and the long--range
energy densities defined from the space--averaging functional $\Delta _{A}$
(Definition \ref{definition de deltabis}) for $A=\mathfrak{e}_{\Phi _{a}}+i%
\mathfrak{e}_{\Phi _{a}^{\prime }}$, see (\ref{eq:enpersite}). Using that%
\index{Space--averaging functional}%
\begin{equation}
\Delta _{a,\pm }\left( \rho \right) :=\gamma _{a,\pm }\Delta _{\mathfrak{e}%
_{\Phi _{a}}+i\mathfrak{e}_{\Phi _{a}^{\prime }}}\left( \rho \right) \in
\lbrack 0,\Vert \Phi _{a}\Vert _{\mathcal{W}_{1}}^{2}+\Vert \Phi
_{a}^{\prime }\Vert _{\mathcal{W}_{1}}^{2}]
\label{definition of delta long range}
\end{equation}%
(cf. (\ref{delta bounded}) and (\ref{eq:enpersite bounded})) with $\gamma
_{a,\pm }\in \{0,1\}$ being the negative and positive parts (\ref{remark
positive negative part gamma}) of the fixed measurable function $\gamma
_{a}\in \{-1,1\}$, we define the free--energy density functional $f_{%
\mathfrak{m}}^{\sharp }$ as follows:

\begin{definition}[Free--energy density functional $f_{\mathfrak{m}}^{\sharp
}$]
\label{Free-energy density long range}\mbox{ }\newline
\index{Free--energy density functional!long--range|textbf}For $\beta \in
(0,\infty ]$, the free--energy density functional $f_{\mathfrak{m}}^{\sharp
} $ w.r.t. any $\mathfrak{m}\in \mathcal{M}_{1}$ is the map from $E_{%
\vec{\ell}}$ to $\mathbb{R}$ defined by%
\begin{equation*}
\rho \mapsto f_{\mathfrak{m}}^{\sharp }\left( \rho \right) :=\left\Vert
\Delta _{a,+}\left( \rho \right) \right\Vert _{1}-\left\Vert \Delta
_{a,-}\left( \rho \right) \right\Vert _{1}+e_{\Phi }(\rho )-\beta
^{-1}s(\rho ).
\end{equation*}
\end{definition}

By Corollary \ref{corollary property free--energy density functional} (i),
this functional is well--defined on $E_{\vec{\ell}}$. It is also t.i. and
affine. Moreover, on the dense set $\mathcal{E}_{1}$ of extreme states of $%
E_{1}$, i.e., on the dense set of ergodic states (see Definition \ref%
{def:egodic}, Theorem \ref{theorem ergodic extremal} and Corollary \ref%
{lemma density of extremal points}), $f_{\mathfrak{m}}^{\sharp }$ equals\
the \emph{reduced free--energy density functional} $g_{\mathfrak{m}}$
defined on $E_{\vec{\ell}}$ as follows:

\begin{definition}[Reduced free--energy density functional $g_{\mathfrak{m}}$%
]
\label{Reduced free energy}\mbox{ }\newline
\index{Free--energy density functional!reduced|textbf}For $\beta \in
(0,\infty ]$, the reduced free--energy density functional $g_{\mathfrak{m}}$
w.r.t. any $\mathfrak{m}\in \mathcal{M}_{1}$ is the map from $E_{%
\vec{\ell}}$ to $\mathbb{R}$ defined by%
\begin{equation*}
\rho \mapsto g_{\mathfrak{m}}\left( \rho \right) :=\Vert \gamma _{a,+}\rho
\left( \mathfrak{e}_{\Phi _{a}}+i\mathfrak{e}_{\Phi _{a}^{\prime }}\right)
\Vert _{2}^{2}-\Vert \gamma _{a,-}\rho \left( \mathfrak{e}_{\Phi _{a}}+i%
\mathfrak{e}_{\Phi _{a}^{\prime }}\right) \Vert _{2}^{2}+e_{\Phi }(\rho
)-\beta ^{-1}s(\rho ).
\end{equation*}
\end{definition}

This functional is an essential ingredient of the monograph. By Corollary %
\ref{corollary property free--energy density functional} (ii), it is
well--defined and by using Lemmata \ref{lemma property entropy} and \ref%
{Th.en.func} (i) as well as the weak$^{\ast }$--continuity of the maps 
\begin{equation}
\rho \mapsto \Vert \gamma _{a,\pm }\rho \left( \mathfrak{e}_{\Phi _{a}}+i%
\mathfrak{e}_{\Phi _{a}^{\prime }}\right) \Vert _{2}^{2}\in \lbrack 0,\Vert
\Phi _{a}\Vert _{2}^{2}+\Vert \Phi _{a}^{\prime }\Vert _{2}^{2}]
\label{map cool0}
\end{equation}%
defined for all $\rho \in E_{\vec{\ell}}$ (cf. (\ref{eq:enpersite bounded}%
)), it has the following properties\footnote{%
The proof of (i) uses the weak$^{\ast }$--continuity of $\rho \mapsto
|\gamma _{a,\pm }\rho \left( \mathfrak{e}_{\Phi _{a}}+i\mathfrak{e}_{\Phi
_{a}^{\prime }}\right) |^{2}$, the inequality $|\gamma _{a,\pm }\rho \left( 
\mathfrak{e}_{\Phi _{a}}+i\mathfrak{e}_{\Phi _{a}^{\prime }}\right) |\leq
\Vert \Phi _{a}\Vert _{\mathcal{W}_{1}}^{2}+\Vert \Phi _{a}^{\prime }\Vert _{%
\mathcal{W}_{1}}^{2}$, and Lebesgue's dominated convergence theorem as $%
\mathfrak{m}\in \mathcal{M}_{1}$.}:

\begin{lemma}[Properties of the reduced free--energy density functional $g_{%
\mathfrak{m}}$]
\label{lemma property reduced free--energy}\emph{(i) }%
\index{Free--energy density functional!reduced}The map $\rho \mapsto g_{%
\mathfrak{m}}\left( \rho \right) $ from $E_{%
\vec{\ell}}$ to $\mathbb{R}$ is a weak$^{\ast }$--lower semi--continuous
functional.\newline
\emph{(ii)} For any t.i. state $\rho \in E_{1}$, there is a sequence $\{\hat{%
\rho}_{n}\}_{n=1}^{\infty }\subseteq \mathcal{E}_{1}$ of ergodic states
converging in the weak$^{\ast }$--topology to $\rho $ and such that 
\begin{equation*}
g_{\mathfrak{m}}(\rho )=\lim\limits_{n\rightarrow \infty }g_{\mathfrak{m}}(%
\hat{\rho}_{n}).
\end{equation*}
\end{lemma}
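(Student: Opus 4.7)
The plan is to decompose $g_{\mathfrak{m}}$ into four pieces and handle the continuity/semicontinuity properties of each separately. Write
\begin{equation*}
g_{\mathfrak{m}}(\rho)=I_{+}(\rho)-I_{-}(\rho)+e_{\Phi}(\rho)-\beta^{-1}s(\rho),
\end{equation*}
where $I_{\pm}(\rho):=\|\gamma_{a,\pm}\rho(\mathfrak{e}_{\Phi_{a}}+i\mathfrak{e}_{\Phi_{a}^{\prime}})\|_{2}^{2}$. For the local pieces I would quote Lemma \ref{Th.en.func}~(i) (weak$^{\ast}$-continuity of $e_{\Phi}$) and Lemma \ref{lemma property entropy}~(i) (weak$^{\ast}$-upper semicontinuity of $s$, so $-\beta^{-1}s$ is weak$^{\ast}$-lower semicontinuous). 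The real task is then the continuity of the two long-range terms $I_{\pm}$.

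For $I_{\pm}$ I would argue as follows. Fix $a\in\mathcal{A}$. Since $\mathfrak{e}_{\Phi_{a}}+i\mathfrak{e}_{\Phi_{a}^{\prime}}\in\mathcal{U}$ with $\|\mathfrak{e}_{\Phi_{a}}+i\mathfrak{e}_{\Phi_{a}^{\prime}}\|\leq \|\Phi_{a}\|_{\mathcal{W}_{1}}+\|\Phi_{a}^{\prime}\|_{\mathcal{W}_{1}}$ by \eqref{eq:enpersite bounded}, the scalar map $\rho\mapsto\rho(\mathfrak{e}_{\Phi_{a}}+i\mathfrak{e}_{\Phi_{a}^{\prime}})$ is weak$^{\ast}$-continuous on $E_{\vec{\ell}}$, and hence so is its modulus squared, pointwise in $a$. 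Moreover, the pointwise bound
\begin{equation*}
\gamma_{a,\pm}\bigl|\rho(\mathfrak{e}_{\Phi_{a}}+i\mathfrak{e}_{\Phi_{a}^{\prime}})\bigr|^{2}\leq 2\bigl(\|\Phi_{a}\|_{\mathcal{W}_{1}}^{2}+\|\Phi_{a}^{\prime}\|_{\mathcal{W}_{1}}^{2}\bigr)
\end{equation*}
supplies a dominant in $L^{1}(\mathcal{A},\mathfrak{a})$, because $\mathfrak{m}\in\mathcal{M}_{1}$ means $\{\Phi_{a}\},\{\Phi_{a}^{\prime}\}\in\mathcal{L}^{2}(\mathcal{A},\mathcal{W}_{1})$. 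Since $E_{\vec{\ell}}$ is weak$^{\ast}$-metrizable (it is a weak$^{\ast}$-compact subset of $\mathcal{U}^{\ast}$ with $\mathcal{U}$ separable), continuity reduces to sequential continuity, and Lebesgue dominated convergence then gives $I_{\pm}(\rho_{n})\to I_{\pm}(\rho)$ along any weak$^{\ast}$-convergent sequence. Combining the four parts yields (i).

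For (ii), the point is that on the dense subset $\mathcal{E}_{1}\subseteq E_{1}$ of ergodic states the functional $g_{\mathfrak{m}}$ can be approximated along a \emph{specifically chosen} sequence that also takes care of the entropy. Apply Lemma \ref{lemma property entropy}~(ii) to $\rho\in E_{1}$ to obtain a sequence $\{\hat{\rho}_{n}\}_{n\geq 1}\subseteq\mathcal{E}_{1}$ with $\hat{\rho}_{n}\to\rho$ in the weak$^{\ast}$-topology and $s(\hat{\rho}_{n})\to s(\rho)$. By the weak$^{\ast}$-continuity of $I_{+}$, $I_{-}$ and $e_{\Phi}$ established in (i), the remaining three summands also converge along this sequence, so
\begin{equation*}
g_{\mathfrak{m}}(\hat{\rho}_{n})=I_{+}(\hat{\rho}_{n})-I_{-}(\hat{\rho}_{n})+e_{\Phi}(\hat{\rho}_{n})-\beta^{-1}s(\hat{\rho}_{n})\xrightarrow[n\to\infty]{}g_{\mathfrak{m}}(\rho).
\end{equation*}

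The only non-routine step is the weak$^{\ast}$-continuity of $I_{\pm}$, and the mild obstacle there is to justify the passage to the limit under the integral $\int_{\mathcal{A}}d\mathfrak{a}(a)$ uniformly in $a$; this is where metrizability of $E_{\vec{\ell}}$ (needed to reduce to sequences) and the $\mathcal{L}^{2}$-hypothesis on $\{\Phi_{a}\},\{\Phi_{a}^{\prime}\}$ (needed to produce the $L^{1}$-dominant) genuinely enter. Everything else is a direct combination of previously stated results.
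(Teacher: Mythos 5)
Your proof is correct and follows essentially the same route as the paper: the paper establishes (i) from the weak$^{\ast }$--continuity of $\rho \mapsto |\gamma _{a,\pm }\rho (\mathfrak{e}_{\Phi _{a}}+i\mathfrak{e}_{\Phi _{a}^{\prime }})|^{2}$, the uniform bound coming from (\ref{eq:enpersite bounded}) and the $\mathcal{L}^{2}$--hypothesis, and Lebesgue's dominated convergence theorem, exactly as you do. Part (ii) is likewise obtained in the paper by combining Lemma \ref{lemma property entropy} (ii) with the weak$^{\ast }$--continuity of $e_{\Phi }$ and of the maps (\ref{map cool0}), which is your argument verbatim.
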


However, since the maps (\ref{map cool0}) are generally not affine, the
reduced free--energy density functional $g_{\mathfrak{m}}$ has, in general,
a \emph{geometrical} drawback:

\begin{itemize}
\item[($-$)] $g_{\mathfrak{m}}$ is \emph{generally not} convex provided that 
$\Phi _{a,-}\neq 0$ (a.e.) or $\Phi _{a,-}^{\prime }\neq 0$ (a.e.), see
Definition \ref{long range attraction-repulsion}.
\end{itemize}

\noindent This does not occur (w.r.t. the set $E_{1}$ of t.i. states) if the
long--range attractions $\Phi _{a,-}$ and $\Phi _{a,-}^{\prime }$ are \emph{%
trivial} on $E_{1}$, i.e., if%
\begin{equation*}
\rho \mapsto |\gamma _{a,-}\rho (\mathfrak{e}_{\Phi _{a}}+i\mathfrak{e}%
_{\Phi _{a}^{\prime }})|
\end{equation*}%
is (a.e.) a constant map on $E_{1}$, see Remark \ref{Remark convention0}.
The property ($-$) represents a problem for our study because we are
interested in the set of t.i. minimizers of $g_{\mathfrak{m}}$, see Theorem %
\ref{BCS main theorem 1} (i) and Section \ref{Section equilibrium states}.

By contrast, since by Lemmata \ref{lemma property entropy} (i), \ref%
{Th.en.func} (i) and \ref{delta reprentation integral}, the functionals $s$, 
$e_{\Phi }$, and the maps 
\begin{equation}
\rho \mapsto \left\Vert \Delta _{a,\pm }\left( \rho \right) \right\Vert _{1}
\label{map cool}
\end{equation}%
are all affine, the free--energy density functional $f_{\mathfrak{m}%
}^{\sharp }$ is affine. In fact, by using Theorem \ref{theorem choquet} and
Lemma \ref{Corollary 4.1.18.} on each functional $s$, $e_{\Phi }$, and (\ref%
{map cool}), we can decompose, for any t.i. state $\rho \in E_{1}$, the
free--energy density functional $f_{\mathfrak{m}}^{\sharp }$ in terms of an
integral on the set $\mathcal{E}_{1}$:

\begin{lemma}[Properties of the free--energy density functional $f_{%
\mathfrak{m}}^{\sharp }$]
\label{lemma property free--energy density functional}\mbox{ }\newline
\index{Free--energy density functional!long--range}\emph{(i) }The map $\rho
\mapsto f_{\mathfrak{m}}^{\sharp }\left( \rho \right) $ from $E_{%
\vec{\ell}}$ to $\mathbb{R}$ is an affine functional. It is also t.i., i.e.,
for all $x\in \mathbb{Z}^{d}$ and $\rho \in E_{\vec{\ell}}$, $f_{\mathfrak{m}%
}^{\sharp }(\rho \circ \alpha _{x})=f_{\mathfrak{m}}^{\sharp }(\rho )$.%
\newline
\emph{(ii)} The map $\rho \mapsto f_{\mathfrak{m}}^{\sharp }\left( \rho
\right) $ from $E_{1}$ to $\mathbb{R}$ can be decomposed in terms of an
integral on the set $\mathcal{E}_{1}$ of extreme states of $E_{1}$, i.e.,
for all $\rho \in E_{1}$, 
\begin{equation*}
f_{\mathfrak{m}}^{\sharp }\left( \rho \right) =\int_{\mathcal{E}_{1}}\mathrm{%
d}\mu _{\rho }\left( \hat{\rho}\right) \;g_{\mathfrak{m}}\left( \hat{\rho}%
\right) ,
\end{equation*}%
with the probability measure $\mu _{\rho }$ defined by Theorem \ref{theorem
choquet}.
\end{lemma}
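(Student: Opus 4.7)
The plan is to reduce (i) to the affinity and translation invariance results already available for the constituent functionals. Write
\[
f_{\mathfrak{m}}^{\sharp}(\rho)=\int_{\mathcal{A}}\gamma_{a,+}\,\Delta_{\mathfrak{e}_{\Phi_a}+i\mathfrak{e}_{\Phi_a'}}(\rho)\,\mathrm{d}\mathfrak{a}(a)-\int_{\mathcal{A}}\gamma_{a,-}\,\Delta_{\mathfrak{e}_{\Phi_a}+i\mathfrak{e}_{\Phi_a'}}(\rho)\,\mathrm{d}\mathfrak{a}(a)+e_{\Phi}(\rho)-\beta^{-1}s(\rho).
\]
The entropy density $s$ is affine and t.i.\ by Lemma \ref{lemma property entropy}(i), and $e_{\Phi}$ is affine and t.i.\ by Lemma \ref{Th.en.func}(i). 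For each fixed $a\in\mathcal{A}$, Theorem \ref{Lemma1.vonN copy(4)}(i) says that $\rho\mapsto\Delta_{\mathfrak{e}_{\Phi_a}+i\mathfrak{e}_{\Phi_a'}}(\rho)$ is affine and t.i.\ on $E_{\vec{\ell}}$. Since these functionals take values in $[0,\|\Phi_a\|_{\mathcal{W}_1}^2+\|\Phi_a'\|_{\mathcal{W}_1}^2]$ (cf.\ \eqref{definition of delta long range}) and $\mathfrak{m}\in\mathcal{M}_1$ guarantees the $\mathfrak{a}$-integrability of this bound, dominated convergence (or simply linearity of the integral) shows that the integration in $a$ preserves affinity and translation invariance, proving (i).

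For (ii) the strategy is, first, to identify $f_{\mathfrak{m}}^{\sharp}$ with $g_{\mathfrak{m}}$ on the set $\mathcal{E}_1$ of extreme (hence ergodic) states, and then to push this identification through the ergodic decomposition using the affinity established in (i). For any ergodic state $\hat{\rho}\in\mathcal{E}_1$, Definition \ref{def:egodic} (together with Theorem \ref{theorem ergodic extremal}) gives
\[
\Delta_{\mathfrak{e}_{\Phi_a}+i\mathfrak{e}_{\Phi_a'}}(\hat{\rho})=\bigl|\hat{\rho}(\mathfrak{e}_{\Phi_a}+i\mathfrak{e}_{\Phi_a'})\bigr|^{2},
\]
so that $\|\Delta_{a,\pm}(\hat{\rho})\|_1=\|\gamma_{a,\pm}\hat{\rho}(\mathfrak{e}_{\Phi_a}+i\mathfrak{e}_{\Phi_a'})\|_2^{2}$ and therefore $f_{\mathfrak{m}}^{\sharp}(\hat{\rho})=g_{\mathfrak{m}}(\hat{\rho})$ on $\mathcal{E}_1$.

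Next I would apply the barycentric integration formula for affine (and sufficiently regular) functionals, namely Lemma \ref{Corollary 4.1.18.} combined with the ergodic decomposition of Theorem \ref{theorem choquet}, to each of the functionals $s$, $e_{\Phi}$ and $\Delta_{\mathfrak{e}_{\Phi_a}+i\mathfrak{e}_{\Phi_a'}}$ (the last one being exactly Theorem \ref{Lemma1.vonN}(iv)). Summing over these components and using Fubini to interchange the integration over $a\in\mathcal{A}$ and $\hat{\rho}\in\mathcal{E}_1$ yields
\[
f_{\mathfrak{m}}^{\sharp}(\rho)=\int_{\mathcal{E}_1}\mathrm{d}\mu_{\rho}(\hat{\rho})\,f_{\mathfrak{m}}^{\sharp}(\hat{\rho})=\int_{\mathcal{E}_1}\mathrm{d}\mu_{\rho}(\hat{\rho})\,g_{\mathfrak{m}}(\hat{\rho}),
\]
which is exactly (ii).

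The main delicacy is the Fubini step: one must check joint measurability of $(a,\hat{\rho})\mapsto|\hat{\rho}(\mathfrak{e}_{\Phi_a}+i\mathfrak{e}_{\Phi_a'})|^{2}$ on $\mathcal{A}\times\mathcal{E}_1$ and produce an integrable majorant. Measurability follows from the fact that $a\mapsto\mathfrak{e}_{\Phi_a}+i\mathfrak{e}_{\Phi_a'}$ is strongly measurable (being the continuous image of an $\mathcal{L}^{2}$-interaction, cf.\ \eqref{eq:enpersite}--\eqref{eq:enpersite bounded}) together with the weak${}^{\ast}$-continuity of $\hat{\rho}\mapsto\hat{\rho}(A)$ for each $A\in\mathcal{U}$; the dominating function is the constant $\|\Phi_a\|_{\mathcal{W}_1}^{2}+\|\Phi_a'\|_{\mathcal{W}_1}^{2}$, which lies in $L^{1}(\mathcal{A})$ because $\mathfrak{m}\in\mathcal{M}_1$. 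Apart from this bookkeeping the argument is essentially a direct assembly of the quoted lemmas.
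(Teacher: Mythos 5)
Your proof is correct and follows essentially the same route as the paper's: for (i), affinity and translation invariance of the constituent functionals $s$, $e_{\Phi}$ and the space--averaging terms; for (ii), the barycentric decomposition of each affine upper semi--continuous piece (Theorem \ref{theorem choquet} plus Lemma \ref{Corollary 4.1.18.}) combined with the ergodicity identity $\Delta_{A}(\hat{\rho})=\vert\hat{\rho}(A)\vert^{2}$ giving $f_{\mathfrak{m}}^{\sharp}=g_{\mathfrak{m}}$ on $\mathcal{E}_{1}$. The only cosmetic difference is that the paper applies Lemma \ref{Corollary 4.1.18.} directly to the already $a$--integrated maps $\rho\mapsto\Vert\Delta_{a,\pm}(\rho)\Vert_{1}$, whose affinity, translation invariance and weak$^{\ast}$--upper semi--continuity are recorded in Lemma \ref{delta reprentation integral}, thereby sidestepping the Fubini and joint--measurability bookkeeping that you (correctly) carry out by decomposing at fixed $a$ via Theorem \ref{Lemma1.vonN} (iv) and then interchanging the integrals over $\mathcal{A}$ and $\mathcal{E}_{1}$.
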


However, since the maps (\ref{map cool}) are generally not weak$^{\ast }$%
--continuous (see, e.g., Theorem \ref{Lemma1.vonN}), the free--energy
density functional $f_{\mathfrak{m}}^{\sharp }$ has, in general, a \emph{%
topological} drawback:

\begin{itemize}
\item[($+$)] $f_{\mathfrak{m}}^{\sharp }$ is \emph{generally not} weak$%
^{\ast }$--lower semi--continuous on $E_{1}$ provided that $\Phi _{a,+}\neq
0 $ (a.e.) or $\Phi _{a,+}^{\prime }\neq 0$ (a.e.), see Definition \ref{long
range attraction-repulsion}.
\end{itemize}

\noindent This does not appear (w.r.t. the set $E_{1}$ of t.i. states) if
the long--range repulsions $\Phi _{a,+}$ and $\Phi _{a,+}^{\prime }$ are 
\emph{trivial} on $E_{1}$, i.e., if%
\begin{equation*}
\rho \mapsto |\gamma _{a,+}\rho (\mathfrak{e}_{\Phi _{a}}+i\mathfrak{e}%
_{\Phi _{a}^{\prime }})|
\end{equation*}%
is (a.e.) a constant map on $E_{1}$, see Theorem \ref{Lemma1.vonN} (i) and
Remark \ref{Remark convention0}. The problem ($+$) is serious for our study
because we are interested in t.i. minimizers of $f_{\mathfrak{m}}^{\sharp }$%
, see Theorem \ref{BCS main theorem 1} (i) and Section \ref{Section
equilibrium states}.

Neither the free--energy density functional $f_{\mathfrak{m}}^{\sharp }$ nor
the reduced free--energy density functional $g_{\mathfrak{m}}$ has the usual
good properties to analyze their infimum and minimizers over t.i. states.
However, the corresponding variational problems coincide:

\begin{lemma}[Minimum of the free--energy densities]
\label{lemma property free--energy density functional copy(1)}\mbox{ }%
\newline
For any $\mathfrak{m}\in \mathcal{M}_{1}$,%
\begin{equation*}
\inf\limits_{\rho \in E_{1}}\,f_{\mathfrak{m}}^{\sharp }(\rho )=\inf\limits_{%
\hat{\rho}\in \mathcal{E}_{1}}\,f_{\mathfrak{m}}^{\sharp }\left( \hat{\rho}%
\right) =\inf\limits_{\hat{\rho}\in \mathcal{E}_{1}}\,g_{\mathfrak{m}}\left( 
\hat{\rho}\right) =\inf\limits_{\rho \in E_{1}}g_{\mathfrak{m}}\left( \rho
\right) >-\infty
\end{equation*}%
with $\mathcal{E}_{1}$ being the dense set of extreme states of $E_{1}$.
\end{lemma}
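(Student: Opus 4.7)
The plan is to prove the chain of equalities by combining three observations: that $f_{\mathfrak{m}}^{\sharp}$ and $g_{\mathfrak{m}}$ agree on the ergodic states $\mathcal{E}_1$, that $f_{\mathfrak{m}}^{\sharp}$ admits the barycentric decomposition of Lemma \ref{lemma property free--energy density functional}(ii), and that $g_{\mathfrak{m}}$ can be approximated along ergodic sequences by Lemma \ref{lemma property reduced free--energy}(ii). A separate short step handles the uniform lower bound.

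First I would verify finiteness from below. By Lemma \ref{lemma property entropy}(i) and its associated bound $s(\rho)\in[0,2C_{|\mathrm{S}|}]$ (Remark \ref{remark boundedness entropy}), by Lemma \ref{Th.en.func}(ii) giving $|e_{\Phi}(\rho)|\leq\|\Phi\|_{\mathcal{W}_1}$, and by the pointwise bound (\ref{definition of delta long range}) which after integration in $a$ yields $\|\Delta_{a,-}(\rho)\|_1\leq\|\Phi_a\|_2^2+\|\Phi'_a\|_2^2<\infty$ (since $\mathfrak{m}\in\mathcal{M}_1$), we get
\[
f_{\mathfrak{m}}^{\sharp}(\rho)\geq -\bigl(\|\Phi_a\|_2^2+\|\Phi'_a\|_2^2\bigr)-\|\Phi\|_{\mathcal{W}_1}-2\beta^{-1}C_{|\mathrm{S}|}>-\infty
\]
uniformly in $\rho\in E_{\vec{\ell}}$, and the same argument (using now $|\rho(\mathfrak{e}_{\Phi_a}+i\mathfrak{e}_{\Phi'_a})|^2\leq\|\Phi_a\|_{\mathcal{W}_1}^2+\|\Phi'_a\|_{\mathcal{W}_1}^2$ from (\ref{eq:enpersite bounded})) gives the analogous lower bound for $g_{\mathfrak{m}}$.

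Second, I would use the fact stated directly after Definition \ref{Free-energy density long range}: on $\mathcal{E}_1$ one has $\Delta_{\mathfrak{e}_{\Phi_a}+i\mathfrak{e}_{\Phi'_a}}(\hat{\rho})=|\hat{\rho}(\mathfrak{e}_{\Phi_a}+i\mathfrak{e}_{\Phi'_a})|^2$ by ergodicity (Definition \ref{def:egodic} and Theorem \ref{theorem ergodic extremal}), so $f_{\mathfrak{m}}^{\sharp}\restriction_{\mathcal{E}_1}=g_{\mathfrak{m}}\restriction_{\mathcal{E}_1}$ and therefore trivially $\inf_{\mathcal{E}_1}f_{\mathfrak{m}}^{\sharp}=\inf_{\mathcal{E}_1}g_{\mathfrak{m}}$. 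For the equality $\inf_{E_1}f_{\mathfrak{m}}^{\sharp}=\inf_{\mathcal{E}_1}f_{\mathfrak{m}}^{\sharp}$, the inequality $\leq$ is immediate from $\mathcal{E}_1\subseteq E_1$; the reverse inequality follows from Lemma \ref{lemma property free--energy density functional}(ii) together with Theorem \ref{theorem choquet}: for any $\rho\in E_1$,
\[
f_{\mathfrak{m}}^{\sharp}(\rho)=\int_{\mathcal{E}_1}g_{\mathfrak{m}}(\hat{\rho})\,\mathrm{d}\mu_\rho(\hat{\rho})\geq\inf_{\mathcal{E}_1}g_{\mathfrak{m}}=\inf_{\mathcal{E}_1}f_{\mathfrak{m}}^{\sharp},
\]
since $\mu_\rho$ is a probability measure with $\mu_\rho(\mathcal{E}_1)=1$ and the integrand is bounded below by the previous step.

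Third, for $\inf_{E_1}g_{\mathfrak{m}}=\inf_{\mathcal{E}_1}g_{\mathfrak{m}}$, the inequality $\leq$ is again trivial. For $\geq$, given any $\rho\in E_1$ I invoke Lemma \ref{lemma property reduced free--energy}(ii) to obtain a sequence $\{\hat{\rho}_n\}_{n\geq 1}\subseteq\mathcal{E}_1$ with $g_{\mathfrak{m}}(\hat{\rho}_n)\to g_{\mathfrak{m}}(\rho)$, whence $\inf_{\mathcal{E}_1}g_{\mathfrak{m}}\leq g_{\mathfrak{m}}(\rho)$; taking the infimum over $\rho\in E_1$ closes the chain. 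There is no serious obstacle here: the only delicate point is that the inequality in the display above would be vacuous if $\inf_{\mathcal{E}_1}g_{\mathfrak{m}}=-\infty$, which is precisely why the a priori lower bound must be established first; the rest is an organized reading of already--proven facts.
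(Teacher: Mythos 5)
Your proof is correct, but it takes a genuinely different route from the paper's at the key step. For the equality $\inf_{E_{1}}f_{\mathfrak{m}}^{\sharp }=\inf_{\mathcal{E}_{1}}f_{\mathfrak{m}}^{\sharp }$ you invoke the barycentric decomposition $f_{\mathfrak{m}}^{\sharp }(\rho )=\int_{\mathcal{E}_{1}}g_{\mathfrak{m}}(\hat{\rho})\,\mathrm{d}\mu _{\rho }(\hat{\rho})$ of Lemma \ref{lemma property free--energy density functional} (ii) and bound the integral below by $\inf_{\mathcal{E}_{1}}g_{\mathfrak{m}}$; the paper instead observes that $f_{\mathfrak{m}}^{\sharp }$ is the sum of a concave weak$^{\ast }$--upper semi--continuous functional and a concave weak$^{\ast }$--lower semi--continuous functional on the compact convex set $E_{1}$, and applies the extension of the Bauer maximum principle (Lemma \ref{Bauer maximum principle bis}), one of the new abstract results the authors single out. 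Your route is legitimate and arguably more economical here, since Lemma \ref{lemma property free--energy density functional} (ii) is already established (from Theorem \ref{theorem choquet} and Lemma \ref{Corollary 4.1.18.}) independently of the present lemma, so there is no circularity; the paper's route buys generality, as the Bauer--type argument does not require an explicit integral representation of the functional over the extreme points and is advertised as applicable to more general long--range interactions. You also handle the last equality $\inf_{E_{1}}g_{\mathfrak{m}}=\inf_{\mathcal{E}_{1}}g_{\mathfrak{m}}$ slightly more directly: you apply Lemma \ref{lemma property reduced free--energy} (ii) to an arbitrary $\rho \in E_{1}$ and take the infimum, whereas the paper first produces an actual minimizer via weak$^{\ast }$--lower semi--continuity and then approximates it; both are fine. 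Your explicit a priori lower bound matches the paper's (which cites the same ingredients without writing the bound out), and your remark about why it is needed is well placed, though strictly speaking the equality chain would degenerate consistently to $-\infty$ on both sides even without it -- the bound is really needed only for the final assertion $>-\infty$.
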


\begin{proof}%
First, as $\mathfrak{m}\in \mathcal{M}_{1}$, note that all infima in this
lemma are finite because of Remark \ref{remark boundedness entropy}, (\ref%
{eq:enpersite bounded}), Lemma \ref{Th.en.func} (ii), (\ref{definition of
delta long range}), and (\ref{map cool0}).

Now, the maps (\ref{map cool}) are both weak$^{\ast }$--upper
semi--continuous affine functionals (Lemma \ref{delta reprentation integral}%
) and the map%
\begin{equation}
\rho \mapsto -\left\Vert \Delta _{a,-}\left( \rho \right) \right\Vert
_{1}+e_{\Phi }(\rho )-\beta ^{-1}s(\rho )  \label{map2}
\end{equation}%
from $E_{1}$ to $\mathbb{R}$ is affine and weak$^{\ast }$--lower
semi--continuous (cf. Lemmata \ref{lemma property entropy} (i) and \ref%
{Th.en.func} (i)). Therefore, $f_{\mathfrak{m}}^{\sharp }$ is the sum of a
concave weak$^{\ast }$--lower semi--continuous functional and a concave weak$%
^{\ast }$--upper semi--continuous functional, whereas $E_{1}$ is weak$^{\ast
}$--compact and convex. Applying Lemma \ref{Bauer maximum principle bis}%
\index{Bauer maximum principle!Extension}, we obtain that%
\begin{equation*}
\inf\limits_{\rho \in E_{1}}\,f_{\mathfrak{m}}^{\sharp }(\rho )=\inf\limits_{%
\hat{\rho}\in \mathcal{E}_{1}}\,f_{\mathfrak{m}}^{\sharp }\left( \hat{\rho}%
\right) .
\end{equation*}%
Since $f_{\mathfrak{m}}^{\sharp }=g_{\mathfrak{m}}$ on $\mathcal{E}_{1}$, it
remains to prove the equality 
\begin{equation}
\inf\limits_{\rho \in E_{1}}g_{\mathfrak{m}}\left( \rho \right)
=\inf\limits_{\hat{\rho}\in \mathcal{E}_{1}}\,g_{\mathfrak{m}}\left( \hat{%
\rho}\right) .  \label{equality idiote encore}
\end{equation}%
In fact, using the weak$^{\ast }$--lower semi--continuity of $g_{\mathfrak{m}%
}$ (Lemma \ref{lemma property reduced free--energy} (i)), the functional $g_{%
\mathfrak{m}}$ has, at least, one minimizer $\omega $ over $E_{1}$ and by
Lemma \ref{lemma property reduced free--energy} (ii) there is a sequence $\{%
\hat{\rho}_{n}\}_{n=1}^{\infty }\subseteq \mathcal{E}_{1}$ of ergodic states
converging in the weak$^{\ast }$--topology to $\omega $ with the property
that $g_{\mathfrak{m}}(\hat{\rho}_{n})$ converges to $g_{\mathfrak{m}%
}(\omega )$ as $n\rightarrow \infty $. The latter yields Equality (\ref%
{equality idiote encore}). 
\end{proof}%

\begin{remark}[Extension of the Bauer maximum principle]
\mbox{ }\newline
\index{Bauer maximum principle!Extension}Lemma \ref{Bauer maximum principle
bis} is an extension of the Bauer maximum principle (Lemma \ref{Bauer
maximum principle}) which does not seem to have been observed before. This
lemma can be useful to do similar studies for more general long--range
interactions as it is defined in \cite{Petz2008,monsieurremark} for quantum
spin systems (see Remark \ref{Quantum spin systems}).
\end{remark}

Lemma \ref{lemma property free--energy density functional copy(1)} might be
surprising as \emph{no inequality} between $g_{\mathfrak{m}}\left( \rho
\right) $ and $f_{\mathfrak{m}}^{\sharp }\left( \rho \right) $ is generally
valid for all t.i. states $\rho \in E_{1}$. In fact, it is a pivotal result
because the variational problems of\ Lemma \ref{lemma property free--energy
density functional copy(1)} are found in the analysis of the thermodynamics
of all models $\mathfrak{m}\in \mathcal{M}_{1}$ at fixed inverse temperature 
$\beta \in (0,\infty )$ in the grand--canonical ensemble.

Indeed, the first task on the thermodynamics of long--range models is the
analysis of the thermodynamic limit $l\rightarrow \infty $ of the
finite--volume pressure%
\index{Pressure!finite volume} 
\begin{equation}
p_{l}=p_{l,\mathfrak{m}}:=%
\frac{1}{\beta |\Lambda _{l}|}\ln \mathrm{Trace}_{\wedge \mathcal{H}%
_{\Lambda _{l}}}(\mathrm{e}^{-\beta U_{l}})  \label{BCS pressure}
\end{equation}%
associated with the internal energy $U_{l}$ for $\beta \in (0,\infty )$ and
any $\mathfrak{m}\in \mathcal{M}_{1}$, see Definition \ref{definition
BCS-type model}. This limit defines a map $\mathfrak{m}\mapsto \mathrm{P}_{%
\mathfrak{m}}^{\sharp }$ from $\mathcal{M}_{1}$ to $\mathbb{R}$:

\begin{definition}[Pressure $\mathrm{P}_{\mathfrak{m}}^{\sharp }$]
\label{Pressure}\mbox{ }\newline
\index{Pressure}For $\beta \in (0,\infty )$, the (infinite--volume) pressure
is the map from $\mathcal{M}_{1}$ to $\mathbb{R}$ defined by%
\begin{equation*}
\mathfrak{m}\mapsto \mathrm{P}_{\mathfrak{m}}^{\sharp }:=\underset{%
l\rightarrow \infty }{\lim }\left\{ p_{l,\mathfrak{m}}\right\} .
\end{equation*}
\end{definition}

\noindent The pressure $\mathrm{P}_{\mathfrak{m}}^{\sharp }$ is
well--defined for any $\mathfrak{m}\in \mathcal{M}_{1}$ and can be written
as an infimum of either the free--energy density functional $f_{\mathfrak{m}%
}^{\sharp }$ or the reduced free--energy density functional $g_{\mathfrak{m}%
} $ over states (see Lemma \ref{lemma property free--energy density
functional copy(1)}):

\begin{theorem}[Pressure $\mathrm{P}_{\mathfrak{m}}^{\sharp }$ as a
variational problem on states]
\label{BCS main theorem 1}\mbox{ }\newline
\emph{(i)} 
\index{Pressure!variational problems}For any $\mathfrak{m}\in \mathcal{M}%
_{1} $,%
\begin{equation*}
\mathrm{P}_{\mathfrak{m}}^{\sharp }=-\inf\limits_{\rho \in E_{1}}\,f_{%
\mathfrak{m}}^{\sharp }(\rho )=-\inf\limits_{\rho \in E_{1}}g_{\mathfrak{m}%
}\left( \rho \right) <\infty .
\end{equation*}%
\emph{(ii)} The map $\mathfrak{m}\mapsto \mathrm{P}_{\mathfrak{m}}^{\sharp }$
from $\mathcal{M}_{1}$ to $\mathbb{R}$ is locally Lipschitz continuous.
\end{theorem}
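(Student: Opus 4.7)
My plan is to first reduce the two variational formulas to one via Lemma \ref{lemma property free--energy density functional copy(1)} (which tells us $\inf f_{\mathfrak{m}}^{\sharp}(E_{1})=\inf g_{\mathfrak{m}}(E_{1})>-\infty$), so the finiteness $\mathrm{P}_{\mathfrak{m}}^{\sharp}<\infty$ is automatic once we identify $\mathrm{P}_{\mathfrak{m}}^{\sharp}$ with, say, $-\inf g_{\mathfrak{m}}(E_{1})$. For the lower bound I will use ergodic trial states. Pick any $\hat\rho\in\mathcal{E}_{1}$, restrict it to $\hat\rho_{l}$ on $\mathcal{U}_{\Lambda_{l}}$, and apply the finite-volume Gibbs variational principle (Theorem \ref{passivity.Gibbs}) to get $\beta p_{l}\geq -\beta|\Lambda_{l}|^{-1}\hat\rho(U_{l})+|\Lambda_{l}|^{-1}s(\hat\rho_{l})$. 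As $l\to\infty$, the entropy term tends to $s(\hat\rho)$, the local part of the energy tends to $e_{\Phi}(\hat\rho)$, and the long-range part has the form $(|\Lambda_{l}|^{-2})\hat\rho(A_{a}^{\ast}A_{a})$ with $A_{a}=U_{\Lambda_{l}}^{\Phi_{a}}+iU_{\Lambda_{l}}^{\Phi_{a}^{\prime}}$. Up to a boundary correction $O(|\partial\Lambda_{l}|/|\Lambda_{l}|)$, one rewrites $A_{a}/|\Lambda_{l}|$ as the spatial average $(\mathfrak{e}_{\Phi_{a}}+i\mathfrak{e}_{\Phi_{a}^{\prime}})_{\Lambda_{l}}$, and Definition \ref{definition de deltabis} together with ergodicity (Theorem \ref{theorem ergodic extremal}) yields $\hat\rho(A_{a}^{\ast}A_{a})/|\Lambda_{l}|^{2}\to|\hat\rho(\mathfrak{e}_{\Phi_{a}}+i\mathfrak{e}_{\Phi_{a}^{\prime}})|^{2}$; Lebesgue dominated convergence (using the uniform bound $\|\Phi_{a}\|_{\mathcal{W}_{1}}^{2}+\|\Phi_{a}^{\prime}\|_{\mathcal{W}_{1}}^{2}\in L^{1}(\mathfrak{a})$) passes this through the $\mathrm{d}\mathfrak{a}(a)$ integration. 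Taking $\sup$ over $\hat\rho$ gives $\liminf p_{l}\geq-\inf_{\hat\rho\in\mathcal{E}_{1}}g_{\mathfrak{m}}(\hat\rho)=-\inf_{\rho\in E_{1}}g_{\mathfrak{m}}(\rho)$.

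For the upper bound I would first work on the dense subspace $\mathcal{M}_{1}^{\mathrm{df}}$, where the internal energy reduces to $U_{l}=U_{\Lambda_{l}}^{\Phi}-|\Lambda_{l}|^{-1}\sum_{i} A_{i}^{\ast}A_{i}+|\Lambda_{l}|^{-1}\sum_{j}B_{j}^{\ast}B_{j}$ with finitely many local $A_{i},B_{j}$ labelled by the attractive/repulsive discrete terms. The repulsive terms are linearized by the operator inequality $B_{j}^{\ast}B_{j}/|\Lambda_{l}|\geq c_{+,j}B_{j}^{\ast}+\bar c_{+,j}B_{j}-|\Lambda_{l}||c_{+,j}|^{2}$ (valid for any $c_{+,j}\in\mathbb{C}$ since it comes from $(B_{j}/\sqrt{|\Lambda_{l}|}-\sqrt{|\Lambda_{l}|}c_{+,j})^{\ast}(\cdot)\geq 0$), and the attractive terms by the Hubbard--Stratonovich Gaussian identity $\exp(\beta A_{i}^{\ast}A_{i}/|\Lambda_{l}|)=(\beta|\Lambda_{l}|/\pi)\int_{\mathbb{C}}\exp(-\beta|\Lambda_{l}||c_{-,i}|^{2}+\beta(c_{-,i}A_{i}^{\ast}+\bar c_{-,i}A_{i}))\,\mathrm{d}^{2}c_{-,i}$. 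Inserting both into $\mathrm{Tr}\,\mathrm{e}^{-\beta U_{l}}$ and applying Laplace's method to the finite-dimensional Gaussian integral over $c_{-}$ yields
\begin{equation*}
\limsup_{l\to\infty}p_{l}\leq\inf_{c_{+}}\sup_{c_{-}}\left\{|c_{+}|^{2}-|c_{-}|^{2}+P_{\mathrm{loc}}(c_{+},c_{-})\right\},
\end{equation*}
where $P_{\mathrm{loc}}(c_{+},c_{-})$ is the infinite-volume pressure of the purely \emph{local} effective interaction $\Phi_{\mathrm{eff}}(c_{+},c_{-})\in\mathcal{W}_{1}$ linear in $c_{\pm}$. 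The classical local-interaction result (Araki--Moriya \cite{Araki-Moriya}) gives $P_{\mathrm{loc}}(c_{+},c_{-})=-\inf_{\rho\in E_{1}}f_{\Phi_{\mathrm{eff}}(c_{+},c_{-})}(\rho)$. Swapping the $\sup_{c_{-}}$ past the $\sup$ over $\rho$ and performing the explicit quadratic optimizations in $c_{\pm}$ at each fixed $\rho$ (complete the square) produces exactly the $\pm|\rho(\mathfrak{e}_{\Phi_{a}}+i\mathfrak{e}_{\Phi_{a}^{\prime}})|^{2}$ contributions of $g_{\mathfrak{m}}$, yielding the matching upper bound $\mathrm{P}_{\mathfrak{m}}^{\sharp}\leq-\inf_{\rho\in E_{1}}g_{\mathfrak{m}}(\rho)$.

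To pass from $\mathcal{M}_{1}^{\mathrm{df}}$ to all of $\mathcal{M}_{1}$ I would use continuity: the map $\mathfrak{m}\mapsto p_{l,\mathfrak{m}}$ is locally Lipschitz \emph{uniformly in} $l$ by the Peierls--Bogoliubov inequality $|p_{l,\mathfrak{m}_{1}}-p_{l,\mathfrak{m}_{2}}|\leq\|U_{l}^{\mathfrak{m}_{1}}-U_{l}^{\mathfrak{m}_{2}}\|/|\Lambda_{l}|$ combined with the expansion $A_{1}^{\ast}A_{1}-A_{2}^{\ast}A_{2}=(A_{1}-A_{2})^{\ast}A_{1}+A_{2}^{\ast}(A_{1}-A_{2})$, the bound $\|U_{\Lambda_{l}}^{\Phi_{a}+i\Phi_{a}^{\prime}}\|\leq|\Lambda_{l}|\|\Phi_{a}+i\Phi_{a}^{\prime}\|_{\mathcal{W}_{1}}$, and Cauchy--Schwarz in $L^{2}(\mathcal{A},\mathfrak{a})$; this yields
\begin{equation*}
|p_{l,\mathfrak{m}_{1}}-p_{l,\mathfrak{m}_{2}}|\leq\|\mathfrak{m}_{1}-\mathfrak{m}_{2}\|_{\mathcal{M}_{1}}\left(\|\mathfrak{m}_{1}\|_{\mathcal{M}_{1}}+\|\mathfrak{m}_{2}\|_{\mathcal{M}_{1}}+1\right),
\end{equation*}
uniformly in $l$. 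The same bound carries over to the limit $\mathrm{P}_{\mathfrak{m}}^{\sharp}$, proving (ii); and in tandem, the analogous Lipschitz estimate for $\mathfrak{m}\mapsto\inf_{\rho\in E_{1}}g_{\mathfrak{m}}(\rho)$ (obtained the same way, since only $e_{\Phi}(\rho)$ and the $L^{2}$-norms of $\rho(\mathfrak{e}_{\Phi_{a}}+i\mathfrak{e}_{\Phi_{a}^{\prime}})$ enter) lets the identity $\mathrm{P}_{\mathfrak{m}}^{\sharp}=-\inf g_{\mathfrak{m}}(E_{1})$ extend from $\mathcal{M}_{1}^{\mathrm{df}}$ to the full space $\mathcal{M}_{1}$.

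The main obstacle is the upper bound: while the Hubbard--Stratonovich trick cleanly linearizes the attractive long-range terms into a finite-dimensional Gaussian integral whose asymptotics is governed by Laplace's method, the repulsive terms must be handled by an \emph{operator} inequality and the subsequent min-max exchange between $\inf_{c_{+}}$, $\sup_{c_{-}}$ and $\sup_{\rho}$ is delicate — it requires the effective local model to depend affinely on $c_{\pm}$ (which it does by construction) and the inner optimization to be a simple completing-the-square, but verifying this carefully (and controlling the boundary corrections in the approximation of $U_{\Lambda_{l}}^{\Phi_{a}}/|\Lambda_{l}|$ by $(\mathfrak{e}_{\Phi_{a}})_{\Lambda_{l}}$ inside the non-commutative trace) is the technical heart of the argument.
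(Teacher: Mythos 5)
Your lower bound via ergodic trial states and passivity, the reduction to the dense subspace $\mathcal{M}_{1}^{\mathrm{df}}$, and the uniform Lipschitz estimate for part (ii) all agree in substance with the paper's argument (Lemmata \ref{lemma reduction df}, \ref{lemma reduction dfbis}, \ref{Lemma lower bound pressure copy(1)} and the bound (\ref{bound BCS norms})). The upper bound, however, contains a genuine gap, and it sits exactly where you locate the ``technical heart''. The Hubbard--Stratonovich identity you invoke,
\[
\exp\bigl(\beta A^{\ast }A/|\Lambda _{l}|\bigr)=\frac{\beta |\Lambda _{l}|}{\pi }\int_{\mathbb{C}}\exp \bigl(-\beta |\Lambda _{l}||c|^{2}+\beta (cA^{\ast }+\bar{c}A)\bigr)\,\mathrm{d}^{2}c,
\]
is a scalar identity: for the non--normal operators $A=U_{\Lambda _{l}}^{\Phi _{a}}+iU_{\Lambda _{l}}^{\Phi _{a}^{\prime }}$ the Gaussian moments of $\exp (\beta (cA^{\ast }+\bar{c}A))$ do not resum to $\exp (\beta A^{\ast }A/|\Lambda _{l}|)$, and even if they did you could not ``insert'' the identity into $\mathrm{Trace}(\mathrm{e}^{-\beta U_{l}})$, because $A^{\ast }A$ does not commute with the remaining terms of $U_{l}$, so $\mathrm{e}^{-\beta U_{l}}$ does not factorize. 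A correct Hubbard--Stratonovich representation would require a time--sliced functional integral over $\tau $--dependent fields, whose Laplace asymptotics is not controlled by finite--dimensional arguments. Note moreover that the operator inequality $A^{\ast }A\geq |\Lambda _{l}|(cA^{\ast }+\bar{c}A)-|\Lambda _{l}|^{2}|c|^{2}$ linearizes the \emph{attractive} terms only in the direction yielding a \emph{lower} bound on $p_{l}$ (since $X\leq Y$ implies $\mathrm{Trace}(\mathrm{e}^{-\beta X})\geq \mathrm{Trace}(\mathrm{e}^{-\beta Y})$), and $-A^{\ast }A$ admits no linear lower bound; so the attractive long--range part is precisely what your scheme cannot bound from above. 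The paper circumvents this with a different device: it regroups the lattice into $(2l+1)$--blocks, views each block as one site of an enlarged CAR algebra so that the model becomes permutation invariant over blocks, and applies the St{\o}rmer/de~Finetti structure theorem to produce ergodic test states (Corollary \ref{pression_inv_perm} and Lemma \ref{Lemma lower bound pressure}, following \cite{Petz2008,monsieurremark}), explicitly avoiding Ginibre--type inequalities.

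A second, independent problem is the order of optimization in your claimed upper bound $\inf_{c_{+}}\sup_{c_{-}}\{|c_{+}|^{2}-|c_{-}|^{2}+P_{\mathrm{loc}}\}$: that quantity is $-\mathrm{F}_{\mathfrak{m}}^{\flat }=\mathrm{P}_{\mathfrak{m}}^{\flat }$, whereas the value matching your lower bound $-\inf g_{\mathfrak{m}}(E_{1})$ is $-\mathrm{F}_{\mathfrak{m}}^{\sharp }$, obtained with the opposite order (Theorem \ref{theorem saddle point}). The paper shows that $\mathrm{P}_{\mathfrak{m}}^{\flat }>\mathrm{P}_{\mathfrak{m}}^{\sharp }$ generically when both attractions and repulsions are present (Section \ref{Section thermo game}), so even granting your linearizations the sandwich would not close: the exchange of $\inf_{c_{+}}$ with the supremum over states is not a harmless completing--the--square but the precise point where the thermodynamic game fails to have a saddle point.
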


This theorem is a combination of Theorem \ref{BCS main theorem 1 copy(1)}
with Lemma \ref{lemma property free--energy density functional copy(1)}. Its
proof uses many arguments broken, for the sake of clarity, in several
Lemmata in Chapter \ref{section proof of theorem main}. In fact, some
arguments generalize those of \cite[Theorem 11.4]{Araki-Moriya} to
non--standard potentials $\Phi \in \mathcal{W}_{1}$ but others are new, in
particular, the ones related to the long--range interaction%
\index{Interaction!long--range}%
\begin{equation*}
\frac{1}{|\Lambda _{l}|}\int_{\mathcal{A}}\gamma _{a}(U_{\Lambda _{l}}^{\Phi
_{a}}+iU_{\Lambda _{l}}^{\Phi _{a}^{\prime }})^{\ast }(U_{\Lambda
_{l}}^{\Phi _{a}}+iU_{\Lambda _{l}}^{\Phi _{a}^{\prime }})\mathrm{d}%
\mathfrak{a}\left( a\right) .
\end{equation*}%
Note that one\ argument concerning the long--range interaction uses
permutation invariant states described in Chapter \ref{Stoermer}. This
method turns out to be similar to the one used in \cite[Theorem 3.4]%
{Petz2008} and \cite[Lemma 6.1]{monsieurremark} for quantum spin systems
(see Remark \ref{Quantum spin systems}).

Indeed, for t.i. quantum spin systems%
\index{Quantum spin systems!long--range} with long--range components, the
(infinite--volume) pressure was recently proven to be given by a variational
problem over states in \cite{Petz2008,monsieurremark}. In \cite{Petz2008}
the long--range part of one--dimensional models has the form $|\Lambda _{L}|%
\mathfrak{g}\left( A_{L}\right) $ with $\mathfrak{g}$ being any real
continuous function (and with a stronger norm than $\Vert $ $\cdot $ $\Vert
_{\mathcal{W}_{1}}$), whereas in \cite{monsieurremark} there is no
restriction on the dimension and the long--range part is $|\Lambda _{L}|%
\mathfrak{g}\left( A_{L},B_{L}\right) $ for some \textquotedblleft
non--commutative polynomial\textquotedblright\ $\mathfrak{g}$. Here, $A_{L}$
and $B_{L}$ are space--averages (defined similarly as in (\ref{definition de
A L})) for (not necessarily commuting) self-adjoint operators $A$ and $B$ of
the quantum spin algebra described in Remark \ref{Quantum spin systems}.

However, Theorem \ref{BCS main theorem 1} for t.i. Fermi models with
long--range interactions has not been obtained before. Note that a certain
type of t.i. Fermi models with long--range components (e.g., reduced BCS
models) has been analyzed in \cite{RaggioWerner2} via the quantum spin
representation of fermions%
\index{Quantum spin representation}, \emph{which we never use here} as it
generally breaks the translation invariance of interactions of $\mathcal{W}%
_{1}$. Nevertheless, because of the technical approach used in \cite%
{RaggioWerner2}, the (infinite--volume) pressure is given in \cite[II.2
Theorem]{RaggioWerner2} through two variational problems ($\ast $) and ($%
\ast \ast $) over states on a much larger algebra than the original
observable algebra of the model. By \cite[II.2 Theorem and II.3 Proposition
(1)]{RaggioWerner2}, both variational problems ($\ast $) and ($\ast \ast $)
have non--empty compact sets -- respectively $M_{\ast }$ and $M_{\ast \ast }$
-- of minimizers, but the link between them and Gibbs equilibrium states is
unclear. Moreover, by \cite[II.3 Proposition (1)]{RaggioWerner2}, extreme
states of the convex and compact set $M_{\ast }$ are constructed from
minimizers of the second variational problem ($\ast \ast $) which, as the
authors wrote in \cite[p. 642]{RaggioWerner2}, \textquotedblleft can pose a
formidable task\textquotedblright .

In fact, Theorem \ref{BCS main theorem 1} (i) also gives the pressure as two
variational problems. We prove in Theorem \ref{theorem structure of omega
copy(1)} (ii) that extreme states of the convex and weak$^{\ast }$--compact
set $\mathit{\Omega }_{\mathfrak{m}}^{\sharp }$ of all weak$^{\ast }$--limit
points of approximating minimizers of $f_{\mathfrak{m}}^{\sharp }$ over $%
E_{1}$ (cf. Definition \ref{definition equilibirum state} and Lemma \ref%
{lemma minimum sympa copy(1)}) are likewise minimizers of the second
variational problem, i.e., elements of the weak$^{\ast }$--compact set $%
\mathit{%
\hat{M}}_{\mathfrak{m}}$ defined below by (\ref{definition minimizers of
reduced free energy}), see also Lemma \ref{lemma minimum sympa copy(2)} (i).
Meanwhile, the second variational problem can be analyzed and interpreted as
a two--person zero--sum game, see Section \ref{Section thermo game}. In
particular, in contrast to \cite{RaggioWerner2} and the sets $M_{\ast }$ and 
$M_{\ast \ast }$, $\mathit{\hat{M}}_{\mathfrak{m}}$ can be explicitly
characterized for all $\mathfrak{m}\in \mathcal{M}_{1}$, see Theorem \ref%
{theorem structure of omega}, whereas the set $\mathit{\Omega }_{\mathfrak{m}%
}^{\sharp }$ is related to Gibbs equilibrium states in the sense of Theorem %
\ref{lemma limit averaging gibbs states}, see also Theorem \ref{lemma limit
gibbs states periodic}. Before going into such results, we need first to
discuss the definitions and properties of the sets $\mathit{\Omega }_{%
\mathfrak{m}}^{\sharp }$ and $\mathit{\hat{M}}_{\mathfrak{m}}$ in the next
section.

\section{Generalized t.i. equilibrium states\label{Section equilibrium
states}}

We now discuss a special class of states: The (possibly generalized) 
\index{States!equilibrium}equilibrium states which are supposed to describe
physical systems at thermodynamic equilibrium. These states are always
defined in relation to a given interaction which describes the energy
density for a given state as well as the microscopical dynamics. We define
here (possibly generalized) equilibrium states via a variational principle
(Definitions \ref{definition equilibirum state copy(1)} and \ref{definition
equilibirum state}). However, this is not the only reasonable way of
defining equilibrium states. At fixed interaction they can also be defined
as tangent functionals to the corresponding pressure (Definition \ref%
{definition tangent state BCS}) or other conditions like: The local
stability condition, the Gibbs condition, or the Kubo--Martin--Schwinger
(KMS) condition. These definitions are generally not equivalent to each
other. For more details, see \cite{Araki-Moriya}.

From Theorem \ref{BCS main theorem 1} (i), the pressure $\mathrm{P}_{%
\mathfrak{m}}^{\sharp }$ is given by the infimum of the free--energy density
functional $f_{\mathfrak{m}}^{\sharp }$ over t.i. states $\rho \in E_{1}$.
When $\mathfrak{m}=(\Phi ,0,0)\in \mathcal{M}_{1}$ the map%
\begin{equation}
\rho \mapsto f_{\mathfrak{m}}^{\sharp }\left( \rho \right) =f_{\Phi }\left(
\rho \right) :=e_{\Phi }(\rho )-\beta ^{-1}s(\rho )
\label{free--energy density functional local}
\end{equation}%
is weak$^{\ast }$--lower semi--continuous and affine, see Lemmata \ref{lemma
property entropy} (i), \ref{Th.en.func} (i) and Definition \ref{Remark free
energy density}. In particular, it has minimizers in the set $E_{1}$ of t.i.
states. The corresponding set $\mathit{M}_{\Phi }$ of all t.i. minimizers is
a (non--empty) closed face of the Poulsen simplex $E_{1}$. Then, similarly
to what is done for translation invariant quantum spin systems (see, e.g., 
\cite{BrattelliRobinson,Sewell}), \emph{t.i. equilibrium states} are defined
as follows:

\begin{definition}[Set of t.i. equilibrium states]
\label{definition equilibirum state copy(1)}\mbox{ }\newline
\index{States!equilibrium}%
\index{Minimizers}%
\index{Free--energy density functional!long--range!minimizers}For $\beta \in
(0,\infty )$ and any $\mathfrak{m}\in \mathcal{M}_{1}$, the set $\mathit{M}_{%
\mathfrak{m}}^{\sharp }$ of t.i. equilibrium states is the set 
\begin{equation*}
\mathit{M}_{\mathfrak{m}}^{\sharp }:=\left\{ \omega \in E_{1}:\quad f_{%
\mathfrak{m}}^{\sharp }\left( \omega \right) =\inf\limits_{\rho \in
E_{1}}\,f_{\mathfrak{m}}^{\sharp }(\rho )\right\}
\end{equation*}%
of all minimizers of the free--energy density functional $f_{\mathfrak{m}%
}^{\sharp }$ over the set $E_{1}$.
\end{definition}

\noindent The set $\mathit{M}_{\mathfrak{m}}^{\sharp }$ is convex and in
fact, a face by affinity of the free--energy density functional $f_{%
\mathfrak{m}}^{\sharp }$ (Lemma \ref{lemma property free--energy density
functional} (i)):

\begin{lemma}[Properties of non--empty sets $\mathit{M}_{\mathfrak{m}%
}^{\sharp }$]
\label{lemma minimum sympa copy(3)}\mbox{ }\newline
If $\mathfrak{m}\in \mathcal{M}_{1}$ is such that $\mathit{M}_{\mathfrak{m}%
}^{\sharp }$ is non--empty then $\mathit{M}_{\mathfrak{m}}^{\sharp }$ is a
(possibly not closed) face of $E_{1}$.
\end{lemma}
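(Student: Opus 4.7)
The plan is to exploit the fact already established in Lemma \ref{lemma property free--energy density functional} (i) that the free--energy density functional $f_{\mathfrak{m}}^{\sharp}$ is affine on $E_{1}$. Since $\mathit{M}_{\mathfrak{m}}^{\sharp}$ is by definition the level set of $f_{\mathfrak{m}}^{\sharp}$ at its infimum over $E_{1}$, both convexity and the face property will follow directly from this affinity, together with the trivial observation that $f_{\mathfrak{m}}^{\sharp}(\rho)\geq \inf f_{\mathfrak{m}}^{\sharp}(E_{1})$ for every $\rho\in E_{1}$.

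First I would verify convexity: for any $\omega_{1},\omega_{2}\in \mathit{M}_{\mathfrak{m}}^{\sharp}$ and any $\lambda\in[0,1]$, the affinity of $f_{\mathfrak{m}}^{\sharp}$ gives
\begin{equation*}
f_{\mathfrak{m}}^{\sharp}\bigl(\lambda\omega_{1}+(1-\lambda)\omega_{2}\bigr)=\lambda f_{\mathfrak{m}}^{\sharp}(\omega_{1})+(1-\lambda)f_{\mathfrak{m}}^{\sharp}(\omega_{2})=\inf f_{\mathfrak{m}}^{\sharp}(E_{1}),
\end{equation*}
and since $E_{1}$ is convex the convex combination lies in $E_{1}$, hence in $\mathit{M}_{\mathfrak{m}}^{\sharp}$.

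Next I would check the face property. Suppose $\omega=\lambda_{1}\rho_{1}+\cdots+\lambda_{n}\rho_{n}\in \mathit{M}_{\mathfrak{m}}^{\sharp}$, with $\rho_{1},\dots,\rho_{n}\in E_{1}$, $\lambda_{i}\in(0,1)$, and $\sum_{i}\lambda_{i}=1$. Affinity again yields
\begin{equation*}
\inf f_{\mathfrak{m}}^{\sharp}(E_{1})=f_{\mathfrak{m}}^{\sharp}(\omega)=\sum_{i=1}^{n}\lambda_{i}\,f_{\mathfrak{m}}^{\sharp}(\rho_{i}).
\end{equation*}
Since each term satisfies $f_{\mathfrak{m}}^{\sharp}(\rho_{i})\geq \inf f_{\mathfrak{m}}^{\sharp}(E_{1})$ and the coefficients $\lambda_{i}$ are strictly positive with $\sum_{i}\lambda_{i}=1$, equality in this convex combination forces $f_{\mathfrak{m}}^{\sharp}(\rho_{i})=\inf f_{\mathfrak{m}}^{\sharp}(E_{1})$ for every $i$, i.e., $\rho_{i}\in \mathit{M}_{\mathfrak{m}}^{\sharp}$.

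There is essentially no obstacle here: the statement is a direct consequence of the affinity of $f_{\mathfrak{m}}^{\sharp}$, and no appeal to the weak$^{\ast}$--topology or to weak$^{\ast}$--lower semi--continuity is needed (which is consistent with the parenthetical remark that $\mathit{M}_{\mathfrak{m}}^{\sharp}$ is \emph{possibly not closed}). The non--emptiness hypothesis is used only to make the statement meaningful; in the local case $\mathfrak{m}=(\Phi,0,0)$ non--emptiness is automatic by weak$^{\ast}$--compactness of $E_{1}$ and weak$^{\ast}$--lower semi--continuity of $f_{\Phi}$, but in general (as the topological drawback ($+$) shows) the set may be empty, which is why the lemma is stated conditionally.
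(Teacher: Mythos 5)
Your proof is correct and follows exactly the route the paper intends: the paper gives no separate argument for this lemma beyond the remark that $\mathit{M}_{\mathfrak{m}}^{\sharp}$ ``is convex and in fact, a face by affinity of the free--energy density functional $f_{\mathfrak{m}}^{\sharp}$'', which is precisely the affinity-plus-minimality argument you spell out. Your closing observations about closedness and non--emptiness are also consistent with the paper's discussion.
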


Nevertheless, $\mathit{M}_{\mathfrak{m}}^{\sharp }$ is not necessarily weak$%
^{\ast }$--compact and depending on the model $\mathfrak{m}\in \mathcal{M}%
_{1}$, it \emph{could even be empty}. Indeed, the situation is more involved
in the case of long--range models of $\mathcal{M}_{1}$ than for t.i.
interactions of $\mathcal{W}_{1}$ as $f_{\mathfrak{m}}^{\sharp }$ is
generally not weak$^{\ast }$--lower semi--continuous: As explained above, if
the long--range repulsions $\Phi _{a,+}\neq 0$ or $\Phi _{a,+}^{\prime }\neq
0$ then the functional $f_{\mathfrak{m}}^{\sharp }$ is a sum of the maps (%
\ref{map cool}) (with $+$) and (\ref{map2}) which are, respectively, weak$%
^{\ast }$--upper and weak$^{\ast }$--lower semi--continuous functionals, see
Lemmata \ref{lemma property entropy} (i), \ref{Th.en.func} (i) and \ref%
{delta reprentation integral}. In particular, the existence of minimizers of 
$f_{\mathfrak{m}}^{\sharp }$ over $E_{1}$ is unclear unless $\Phi
_{a,+}=\Phi _{a,+}^{\prime }=0$ (a.e.).

Therefore, we shall consider any sequence $\{\rho _{n}\}_{n=1}^{\infty }$ of 
\emph{approximating t.i. minimizers}%
\index{Minimizers!approximating}, that is, any sequence $\{\rho
_{n}\}_{n=1}^{\infty }$ in $E_{1}$ such that 
\begin{equation}
\underset{n\rightarrow \infty }{\lim }f_{\mathfrak{m}}^{\sharp }(\rho
_{n})=\inf\limits_{\rho \in E_{1}}\,f_{\mathfrak{m}}^{\sharp }(\rho ).
\label{approximating minimizer}
\end{equation}%
Such sequences clearly exist and since $E_{1}$ is sequentially weak$^{\ast }$%
--compact\footnote{$E_{1}$ is sequentially weak$^{\ast }$--compact because
it is weak$^{\ast }$--compact and metrizable in the weak$^{\ast }$--topology
(Theorem \ref{Metrizability}).}, they converge in the weak$^{\ast }$%
--topology -- along subsequences -- towards t.i. states $\omega \in E_{1}$.
Thus, \emph{generalized t.i. equilibrium states} are naturally defined as
follows:

\begin{definition}[Set of generalized t.i. equilibrium states]
\label{definition equilibirum state}\mbox{ }\newline
\index{States!generalized equilibrium}%
\index{Minimizers!generalized}%
\index{Free--energy density functional!long--range!generalized minimizers}%
For $\beta \in (0,\infty ]$ and any $\mathfrak{m}\in \mathcal{M}_{1}$, the
set $\mathit{\Omega }_{\mathfrak{m}}^{\sharp }$ of generalized t.i.
equilibrium states is the (non--empty) set 
\begin{align*}
\mathit{\Omega }_{\mathfrak{m}}^{\sharp }:=& \left. 
\Big\{%
\omega \in E_{1}:\exists \{\rho _{n}\}_{n=1}^{\infty }\subseteq E_{1}\mathrm{%
\ }%
\text{with\ weak}^{\ast }\text{--limit point}\ \omega \right. \\
& \left. \qquad \qquad \qquad \qquad \qquad \text{such\ that\ }\underset{%
n\rightarrow \infty }{\lim }f_{\mathfrak{m}}^{\sharp }(\rho
_{n})=\inf\limits_{\rho \in E_{1}}\,f_{\mathfrak{m}}^{\sharp }(\rho )%
\Big\}%
\right.
\end{align*}%
of all weak$^{\ast }$--limit points of approximating minimizers of the
free--energy density functional $f_{\mathfrak{m}}^{\sharp }$ over the set $%
E_{1}$.
\end{definition}

In contrast to the convex set $\mathit{M}_{\mathfrak{m}}^{\sharp }$ which
may be either empty or not weak$^{\ast }$--compact, the set $\mathit{\Omega }%
_{\mathfrak{m}}^{\sharp }\subseteq E_{1}$ is always a (non--empty) weak$%
^{\ast }$--compact convex set:

\begin{lemma}[Properties of the set $\mathit{\Omega }_{\mathfrak{m}}^{\sharp
}$ for $\mathfrak{m}\in \mathcal{M}_{1}$]
\label{lemma minimum sympa copy(1)}\mbox{ }\newline
The set $\mathit{\Omega }_{\mathfrak{m}}^{\sharp }$ is a (non--empty) convex
and weak$^{\ast }$--compact subset of $E_{1}$.
\end{lemma}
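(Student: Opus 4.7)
The plan is to verify the three claims (non--emptiness, convexity, weak$^{\ast }$--compactness) in turn, exploiting three inputs: the weak$^{\ast }$--compactness and metrizability of $E_{1}$, the affinity of $f_{\mathfrak{m}}^{\sharp }$ on $E_{1}$ (Lemma \ref{lemma property free--energy density functional} (i)), and the finiteness of the infimum (Lemma \ref{lemma property free--energy density functional copy(1)}).

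For non--emptiness, I would pick any sequence $\{\rho _{n}\}_{n=1}^{\infty }\subseteq E_{1}$ satisfying (\ref{approximating minimizer}); such a sequence exists because the infimum in Lemma \ref{lemma property free--energy density functional copy(1)} is finite. Since $E_{1}$ is weak$^{\ast }$--compact and metrizable (cf. Theorem \ref{Metrizability}), it is sequentially weak$^{\ast }$--compact, so $\{\rho _{n}\}$ has at least one weak$^{\ast }$--limit point $\omega \in E_{1}$, which by definition lies in $\mathit{\Omega }_{\mathfrak{m}}^{\sharp }$.

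For convexity, let $\omega _{1},\omega _{2}\in \mathit{\Omega }_{\mathfrak{m}}^{\sharp }$ and $\lambda \in (0,1)$. Pick approximating sequences $\{\rho _{n}^{(1)}\}$ and $\{\rho _{n}^{(2)}\}$ converging weak$^{\ast }$ (along subsequences, which I relabel) to $\omega _{1}$ and $\omega _{2}$ respectively and realizing the infimum in the limit. Set $\rho _{n}:=\lambda \rho _{n}^{(1)}+(1-\lambda )\rho _{n}^{(2)}\in E_{1}$. Then $\rho _{n}\to \lambda \omega _{1}+(1-\lambda )\omega _{2}$ in the weak$^{\ast }$--topology, and by affinity of $f_{\mathfrak{m}}^{\sharp }$ on $E_{1}$,
\begin{equation*}
f_{\mathfrak{m}}^{\sharp }(\rho _{n})=\lambda f_{\mathfrak{m}}^{\sharp }(\rho _{n}^{(1)})+(1-\lambda )f_{\mathfrak{m}}^{\sharp }(\rho _{n}^{(2)})\;\underset{n\to \infty }{\longrightarrow }\;\inf\limits_{\rho \in E_{1}}f_{\mathfrak{m}}^{\sharp }(\rho ),
\end{equation*}
so $\lambda \omega _{1}+(1-\lambda )\omega _{2}\in \mathit{\Omega }_{\mathfrak{m}}^{\sharp }$.

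For weak$^{\ast }$--compactness it suffices, since $\mathit{\Omega }_{\mathfrak{m}}^{\sharp }\subseteq E_{1}$ and $E_{1}$ is weak$^{\ast }$--compact and metrizable, to show that $\mathit{\Omega }_{\mathfrak{m}}^{\sharp }$ is sequentially weak$^{\ast }$--closed. Let $\{\omega _{k}\}_{k\in \mathbb{N}}\subseteq \mathit{\Omega }_{\mathfrak{m}}^{\sharp }$ converge weak$^{\ast }$ to $\omega \in E_{1}$, and fix a metric $d$ inducing the weak$^{\ast }$--topology on $E_{1}$ (Theorem \ref{Metrizability}). For each $k$, choose an approximating sequence $\{\rho _{n}^{(k)}\}_{n\in \mathbb{N}}$ with a weak$^{\ast }$--limit point $\omega _{k}$, and select an index $n_{k}$ such that
\begin{equation*}
d(\rho _{n_{k}}^{(k)},\omega _{k})<\frac{1}{k}\quad \text{and}\quad \bigl|f_{\mathfrak{m}}^{\sharp }(\rho _{n_{k}}^{(k)})-\inf\limits_{\rho \in E_{1}}f_{\mathfrak{m}}^{\sharp }(\rho )\bigr|<\frac{1}{k}.
\end{equation*}
The triangle inequality then yields $\rho _{n_{k}}^{(k)}\to \omega $ in the weak$^{\ast }$--topology with $f_{\mathfrak{m}}^{\sharp }(\rho _{n_{k}}^{(k)})\to \inf f_{\mathfrak{m}}^{\sharp }(E_{1})$, whence $\omega \in \mathit{\Omega }_{\mathfrak{m}}^{\sharp }$. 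The only subtle point is this diagonal extraction, which is why metrizability of the weak$^{\ast }$--topology on the (separable--predual) set $E_{1}$ is essential; otherwise one would have to work with nets of approximating minimizers and the \textquotedblleft limit--point\textquotedblright\ definition would not obviously be stable under taking limits.
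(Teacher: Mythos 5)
Your proof is correct and follows essentially the same route as the paper: convexity from the affinity of $f_{\mathfrak{m}}^{\sharp }$, and weak$^{\ast }$--compactness from closedness inside the weak$^{\ast }$--compact metrizable set $E_{1}$ via a diagonal extraction of approximating minimizers. The paper merely outsources that last step to the general Lemma \ref{theorem trivial sympa 0} in the appendix, whose proof is the very same diagonal argument you wrote out.
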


\begin{proof}%
The convexity of the set $\mathit{\Omega }_{\mathfrak{m}}^{\sharp }$ results
from the affinity of $f_{\mathfrak{m}}^{\sharp }$. Since $E_{1}$ is a weak$%
^{\ast }$--compact subset of $\mathcal{U}^{\ast }$, the weak$^{\ast }$%
--topology is metrizable on $E_{1}$ (Theorem \ref{Metrizability}) and $%
\mathit{\Omega }_{\mathfrak{m}}^{\sharp }\subseteq E_{1}$ is weak$^{\ast }$%
--compact by Lemma \ref{theorem trivial sympa 0}. 
\end{proof}%

\begin{notation}[Generalized t.i. equilibrium states]
\label{Notation eq states}\mbox{ }\newline
\index{States!generalized equilibrium!notation}The letter $\omega $ is
exclusively reserved to denote generalized t.i. equilibrium states. Extreme
points of $\mathit{\Omega }_{\mathfrak{m}}^{\sharp }$ are usually written as 
$%
\hat{\omega}\in \mathcal{E}(\mathit{\Omega }_{\mathfrak{m}}^{\sharp })$ (cf.
Theorem \ref{theorem Krein--Millman}).
\end{notation}

Obviously, $\mathit{M}_{\mathfrak{m}}^{\sharp }\subseteq \mathit{\Omega }_{%
\mathfrak{m}}^{\sharp }$ and for any $\Phi \in \mathcal{W}_{1}$, i.e., $%
\mathfrak{m}=(\Phi ,0,0)\in \mathcal{M}_{1}$, $\mathit{M}_{\Phi }=\mathit{%
\Omega }_{\Phi }$. Conversely, ergodic generalized t.i. equilibrium states $%
\hat{\omega}\in \mathit{\Omega }_{\mathfrak{m}}^{\sharp }\cap \mathcal{E}%
_{1} $ are always contained in $\mathit{M}_{\mathfrak{m}}^{\sharp }$:

\begin{lemma}[Ergodic generalized t.i. equilibrium states are minimizers]
\label{lemma minimum sympa}For any $\mathfrak{m}\in \mathcal{M}_{1}$, $%
\mathit{\Omega }_{\mathfrak{m}}^{\sharp }\cap \mathcal{E}_{1}=\mathit{M}_{%
\mathfrak{m}}^{\sharp }\cap \mathcal{E}_{1}$. In particular, if $\mathit{%
\Omega }_{\mathfrak{m}}^{\sharp }$ is a face then it is the weak$^{\ast }$%
--closure of the non--empty set $\mathit{M}_{\mathfrak{m}}^{\sharp }$ of
minimizers of $f_{\mathfrak{m}}^{\sharp }$ over $E_{1}$.
\end{lemma}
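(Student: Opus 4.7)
The plan is to handle the inclusion $\supseteq$ trivially and devote the main effort to $\mathit{\Omega}_{\mathfrak{m}}^{\sharp}\cap \mathcal{E}_{1}\subseteq \mathit{M}_{\mathfrak{m}}^{\sharp}$. For the easy direction, every $\hat{\omega}\in \mathit{M}_{\mathfrak{m}}^{\sharp}$ is the weak$^{\ast }$--limit of the constant approximating sequence equal to itself, so trivially $\hat{\omega}\in \mathit{\Omega}_{\mathfrak{m}}^{\sharp}$. For the non--trivial direction, I would fix $\hat{\omega}\in \mathit{\Omega}_{\mathfrak{m}}^{\sharp}\cap \mathcal{E}_{1}$ and pick any sequence $\{\rho_{n}\}_{n=1}^{\infty}\subseteq E_{1}$ with $\rho_{n}\to \hat{\omega}$ in the weak$^{\ast }$--topology and $f_{\mathfrak{m}}^{\sharp}(\rho_{n})\to \inf f_{\mathfrak{m}}^{\sharp}(E_{1})$, and aim to show $f_{\mathfrak{m}}^{\sharp}(\hat{\omega})=\inf f_{\mathfrak{m}}^{\sharp}(E_{1})$.

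The starting point is the pointwise identity on $E_{1}$,
\[
f_{\mathfrak{m}}^{\sharp}(\rho)-g_{\mathfrak{m}}(\rho)=A_{+}(\rho)-A_{-}(\rho),
\]
where $A_{\pm}(\rho):=\|\Delta_{a,\pm}(\rho)\|_{1}-\|\gamma_{a,\pm}\rho(\mathfrak{e}_{\Phi_{a}}+i\mathfrak{e}_{\Phi_{a}^{\prime}})\|_{2}^{2}\geq 0$, the non--negativity following from the lower bound $\Delta_{A}(\rho)\geq |\rho(A)|^{2}$ of Remark \ref{Remark lower bound delta}. Ergodicity of $\hat{\omega}$ (Definition \ref{def:egodic} combined with Theorem \ref{theorem ergodic extremal}) yields $\Delta_{A}(\hat{\omega})=|\hat{\omega}(A)|^{2}$ for every $A\in \mathcal{U}$; hence $A_{\pm}(\hat{\omega})=0$ and in particular $f_{\mathfrak{m}}^{\sharp}(\hat{\omega})=g_{\mathfrak{m}}(\hat{\omega})$.

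The decisive step will be to prove $A_{-}(\rho_{n})\to 0$. The map $\rho \mapsto \|\Delta_{a,-}(\rho)\|_{1}$ is weak$^{\ast }$--upper semi--continuous on $E_{1}$, as an integral of the weak$^{\ast }$--upper semi--continuous, uniformly bounded functionals $\Delta_{\mathfrak{e}_{\Phi_{a}}+i\mathfrak{e}_{\Phi_{a}^{\prime}}}$ of Theorem \ref{Lemma1.vonN copy(4)}~(i) (via reverse Fatou; cf.\ Lemma \ref{delta reprentation integral}), whereas $\rho \mapsto \|\gamma_{a,-}\rho(\mathfrak{e}_{\Phi_{a}}+i\mathfrak{e}_{\Phi_{a}^{\prime}})\|_{2}^{2}$ is weak$^{\ast }$--continuous, see (\ref{map cool0}). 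Together with $A_{-}(\hat{\omega})=0$, these properties give $\limsup_{n}A_{-}(\rho_{n})\leq 0$, and since $A_{-}\geq 0$, the convergence $A_{-}(\rho_{n})\to 0$ follows. Combining this with $A_{+}\geq 0$, the weak$^{\ast }$--lower semi--continuity of $g_{\mathfrak{m}}$ (Lemma \ref{lemma property reduced free--energy}~(i)) and the equality $\inf g_{\mathfrak{m}}(E_{1})=\inf f_{\mathfrak{m}}^{\sharp}(E_{1})$ of Lemma \ref{lemma property free--energy density functional copy(1)}, I would estimate
\[
f_{\mathfrak{m}}^{\sharp}(\hat{\omega})=g_{\mathfrak{m}}(\hat{\omega})\leq \liminf_{n}g_{\mathfrak{m}}(\rho_{n})\leq \lim_{n}\bigl[f_{\mathfrak{m}}^{\sharp}(\rho_{n})+A_{-}(\rho_{n})\bigr]=\inf f_{\mathfrak{m}}^{\sharp}(E_{1}),
\]
and, with the trivial reverse inequality, conclude $\hat{\omega}\in \mathit{M}_{\mathfrak{m}}^{\sharp}$.

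For the last assertion, assume $\mathit{\Omega}_{\mathfrak{m}}^{\sharp}$ is a face of $E_{1}$, so that $\mathcal{E}(\mathit{\Omega}_{\mathfrak{m}}^{\sharp})\subseteq \mathcal{E}_{1}$. By the equality just proved, $\mathcal{E}(\mathit{\Omega}_{\mathfrak{m}}^{\sharp})=\mathit{\Omega}_{\mathfrak{m}}^{\sharp}\cap \mathcal{E}_{1}=\mathit{M}_{\mathfrak{m}}^{\sharp}\cap \mathcal{E}_{1}$, which is non--empty by the Krein--Milman theorem applied to the weak$^{\ast }$--compact convex set $\mathit{\Omega}_{\mathfrak{m}}^{\sharp}$ (Lemma \ref{lemma minimum sympa copy(1)}); in particular $\mathit{M}_{\mathfrak{m}}^{\sharp}\neq \emptyset$. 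Since $\mathit{M}_{\mathfrak{m}}^{\sharp}$ is convex (Lemma \ref{lemma minimum sympa copy(3)}) and contained in the weak$^{\ast }$--closed set $\mathit{\Omega}_{\mathfrak{m}}^{\sharp}$, Krein--Milman then gives $\mathit{\Omega}_{\mathfrak{m}}^{\sharp}=\overline{\mathrm{conv}}(\mathcal{E}(\mathit{\Omega}_{\mathfrak{m}}^{\sharp}))\subseteq \overline{\mathit{M}_{\mathfrak{m}}^{\sharp}}\subseteq \mathit{\Omega}_{\mathfrak{m}}^{\sharp}$, yielding the asserted equality. The main obstacle in the whole scheme is the convergence $A_{-}(\rho_{n})\to 0$: it rests on the precise interplay between the weak$^{\ast }$--upper semi--continuity of $\Delta_{a,-}$ and the ergodicity of $\hat{\omega}$, which collapses $\Delta_{a,-}$ onto its weak$^{\ast }$--continuous lower bound $|\cdot|^{2}$ exactly at the limit point.
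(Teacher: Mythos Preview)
Your proof is correct. The route differs from the paper's in how the first part is organized. The paper invokes Theorem \ref{Lemma1.vonN}~(iii) directly: $\Delta_{A}$ is weak$^{\ast}$--\emph{continuous} (not merely upper semi--continuous) at every point of $\mathcal{E}_{1}$, so by Lebesgue's dominated convergence both $\Vert\Delta_{a,+}(\rho_{n})\Vert_{1}$ and $\Vert\Delta_{a,-}(\rho_{n})\Vert_{1}$ converge to their values at $\hat{\omega}$. Combined with the weak$^{\ast}$--lower semi--continuity of $f_{\Phi}$, this makes $f_{\mathfrak{m}}^{\sharp}$ itself lower semi--continuous at $\hat{\omega}$, and the conclusion $f_{\mathfrak{m}}^{\sharp}(\hat{\omega})\leq\liminf_{n}f_{\mathfrak{m}}^{\sharp}(\rho_{n})=\inf f_{\mathfrak{m}}^{\sharp}(E_{1})$ is immediate. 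Your argument instead detours through $g_{\mathfrak{m}}$: you use only the \emph{upper} semi--continuity of $\Vert\Delta_{a,-}(\cdot)\Vert_{1}$ (Lemma \ref{delta reprentation integral}) to force $A_{-}(\rho_{n})\to 0$, and then exploit the global lower semi--continuity of $g_{\mathfrak{m}}$ (Lemma \ref{lemma property reduced free--energy}~(i)) rather than any continuity of $\Delta_{a,+}$. Both approaches hinge on the same phenomenon---the collapse $\Delta_{A}(\hat{\omega})=|\hat{\omega}(A)|^{2}$ at ergodic points---but the paper's packaging is shorter, while yours isolates more precisely which half of the long--range term actually needs control along the sequence. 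For the face assertion, your Krein--Milman argument is essentially identical to the paper's, which likewise uses the affinity of $f_{\mathfrak{m}}^{\sharp}$ (encoded in your citation of Lemma \ref{lemma minimum sympa copy(3)}) together with $\mathit{M}_{\mathfrak{m}}^{\sharp}\subseteq\mathit{\Omega}_{\mathfrak{m}}^{\sharp}$.
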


\noindent 
\begin{proof}
Because of Definition \ref{definition equilibirum state}, the proof is a
direct consequence of the continuity of the space--averaging functional $%
\Delta _{A}$ at any ergodic state $\rho \in \mathcal{E}_{1}$ together with
Lebesgue's dominated convergence theorem and the weak$^{\ast }$--lower
semi--continuity of the (local) free--energy density functional $f_{\Phi }$ (%
\ref{free--energy density functional local}), see Theorem \ref{Lemma1.vonN}
(iii), Lemmata \ref{lemma property entropy} (i) and \ref{Th.en.func} (i).

Additionally, if $\mathit{\Omega }_{\mathfrak{m}}^{\sharp }$ is a face then
by affinity of $f_{\mathfrak{m}}^{\sharp }$, any state of the convex hull of 
$\mathit{\Omega }_{\mathfrak{m}}^{\sharp }\cap \mathcal{E}_{1}$ is a
minimizer of $f_{\mathfrak{m}}^{\sharp }$. By the Krein--Milman theorem%
\index{Krein--Milman theorem} (Theorem \ref{theorem Krein--Millman}), $%
\mathit{\Omega }_{\mathfrak{m}}^{\sharp }$ is contained in the weak$^{\ast }$%
--closure of the set of all minimizers of $f_{\mathfrak{m}}^{\sharp }$. On
the other hand, as any minimizer of $f_{\mathfrak{m}}^{\sharp }$ is
contained in the closed set $\mathit{\Omega }_{\mathfrak{m}}^{\sharp }$, the
weak$^{\ast }$--closure of the set of all minimizers is obviously included
in $\mathit{\Omega }_{\mathfrak{m}}^{\sharp }$. 
\end{proof}%

We observe now that Definition \ref{definition equilibirum state} is not the
only natural way of defining generalized t.i. 
\index{States!equilibrium}equilibrium states. Indeed, Theorem \ref{BCS main
theorem 1} (i) says that the pressure $\mathrm{P}_{\mathfrak{m}}^{\sharp }$
is also given (up to a minus sign) by the infimum of the reduced
free--energy density functional $g_{\mathfrak{m}}$ over $E_{1}$. The
functional $g_{\mathfrak{m}}$ from Definition \ref{Reduced free energy} is a
weak$^{\ast }$--lower semi--continuous map (Lemma \ref{lemma property
reduced free--energy} (i)) and has only (usual) minimizers in the set $E_{1}$
as any sequence $\{\rho _{n}\}_{n=1}^{\infty }\subseteq E_{1}$ of
approximating t.i. minimizers of $g_{\mathfrak{m}}$ converges to a minimizer
of $g_{\mathfrak{m}}$ over $E_{1}$. Minimizers of $g_{\mathfrak{m}}$ over $%
E_{1}$ form a non--empty set denoted by%
\index{Minimizers}%
\index{Free--energy density functional!reduced!minimizers} 
\begin{equation}
\mathit{%
\hat{M}}_{\mathfrak{m}}:=\left\{ \omega \in E_{1}:\quad g_{\mathfrak{m}%
}\left( \omega \right) =\inf\limits_{\rho \in E_{1}}\,g_{\mathfrak{m}}(\rho
)\right\} .  \label{definition minimizers of reduced free energy}
\end{equation}%
This set is weak$^{\ast }$--compact and included in the set $\mathit{\Omega }%
_{\mathfrak{m}}^{\sharp }$ of generalized t.i. equilibrium states:

\begin{lemma}[Properties of the set $\mathit{\hat{M}}_{\mathfrak{m}}$ for $%
\mathfrak{m}\in \mathcal{M}_{1}$]
\label{lemma minimum sympa copy(2)}\mbox{ }\newline
\emph{(i)} The set $\mathit{\hat{M}}_{\mathfrak{m}}$ is a (non--empty) weak$%
^{\ast }$--compact subset of $E_{1}$. \newline
\emph{(ii)} The weak$^{\ast }$--closed convex hull of $\mathit{\hat{M}}_{%
\mathfrak{m}}$ is included in $\mathit{\Omega }_{\mathfrak{m}}^{\sharp }$,
i.e., 
\begin{equation*}
\overline{\mathrm{co}(\mathit{\hat{M}}_{\mathfrak{m}})}\subseteq \mathit{%
\Omega }_{\mathfrak{m}}^{\sharp }.
\end{equation*}
\end{lemma}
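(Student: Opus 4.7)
The plan is to handle the two parts almost independently, reducing both to results already established in the excerpt. For (i), I would invoke weak$^{\ast }$--lower semi--continuity of $g_{\mathfrak{m}}$ on $E_{1}$ (Lemma \ref{lemma property reduced free--energy}(i)) together with the weak$^{\ast }$--compactness of $E_{1}$: a weak$^{\ast }$--lower semi--continuous function attains its infimum on a weak$^{\ast }$--compact set, and by Lemma \ref{lemma property free--energy density functional copy(1)} that infimum is finite ($>-\infty $). Hence $\mathit{\hat{M}}_{\mathfrak{m}}$ is non--empty. Since $\mathit{\hat{M}}_{\mathfrak{m}}$ is the preimage under $g_{\mathfrak{m}}$ of a single value in $\mathbb{R}$, it is weak$^{\ast }$--closed by lower semi--continuity, and thus weak$^{\ast }$--compact as a weak$^{\ast }$--closed subset of $E_{1}$.

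For (ii), I would first prove the sharper inclusion $\mathit{\hat{M}}_{\mathfrak{m}}\subseteq \mathit{\Omega }_{\mathfrak{m}}^{\sharp }$. Fix $\omega \in \mathit{\hat{M}}_{\mathfrak{m}}$. By Lemma \ref{lemma property reduced free--energy}(ii) there is a sequence $\{\hat{\rho}_{n}\}_{n=1}^{\infty }\subseteq \mathcal{E}_{1}$ of ergodic states weak$^{\ast }$--converging to $\omega $ such that $g_{\mathfrak{m}}(\hat{\rho}_{n})\to g_{\mathfrak{m}}(\omega )$. As remarked right after Definition \ref{Reduced free energy}, $f_{\mathfrak{m}}^{\sharp }=g_{\mathfrak{m}}$ on $\mathcal{E}_{1}$, so $f_{\mathfrak{m}}^{\sharp }(\hat{\rho}_{n})=g_{\mathfrak{m}}(\hat{\rho}_{n})\to g_{\mathfrak{m}}(\omega )$. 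Moreover, by the definition of $\mathit{\hat{M}}_{\mathfrak{m}}$ and Lemma \ref{lemma property free--energy density functional copy(1)},
\begin{equation*}
g_{\mathfrak{m}}(\omega )\;=\;\inf_{\rho \in E_{1}}g_{\mathfrak{m}}(\rho )\;=\;\inf_{\rho \in E_{1}}f_{\mathfrak{m}}^{\sharp }(\rho ).
\end{equation*}
Thus $\{\hat{\rho}_{n}\}_{n=1}^{\infty }$ is an approximating minimizing sequence of $f_{\mathfrak{m}}^{\sharp }$ over $E_{1}$ with weak$^{\ast }$--limit $\omega $, which is exactly the condition in Definition \ref{definition equilibirum state} for $\omega \in \mathit{\Omega }_{\mathfrak{m}}^{\sharp }$.

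To conclude (ii), I would then invoke Lemma \ref{lemma minimum sympa copy(1)}, which asserts that $\mathit{\Omega }_{\mathfrak{m}}^{\sharp }$ is convex and weak$^{\ast }$--compact (in particular weak$^{\ast }$--closed). Any such set contains the weak$^{\ast }$--closed convex hull of every subset it contains, so from $\mathit{\hat{M}}_{\mathfrak{m}}\subseteq \mathit{\Omega }_{\mathfrak{m}}^{\sharp }$ I would deduce $\overline{\mathrm{co}(\mathit{\hat{M}}_{\mathfrak{m}})}\subseteq \mathit{\Omega }_{\mathfrak{m}}^{\sharp }$.

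There is essentially no genuine obstacle: both parts are extraction steps combining weak$^{\ast }$--lower semi--continuity of $g_{\mathfrak{m}}$ with the identification $f_{\mathfrak{m}}^{\sharp }|_{\mathcal{E}_{1}}=g_{\mathfrak{m}}|_{\mathcal{E}_{1}}$ and the equality of infima in Lemma \ref{lemma property free--energy density functional copy(1)}. The only point meriting a careful check is that Lemma \ref{lemma property reduced free--energy}(ii) really produces an ergodic sequence along which $g_{\mathfrak{m}}$ is continuous at the minimizer $\omega $; this is exactly what is needed to transport a minimizer of $g_{\mathfrak{m}}$ into an element of $\mathit{\Omega }_{\mathfrak{m}}^{\sharp }$, bypassing the fact that $f_{\mathfrak{m}}^{\sharp }$ itself may fail to be weak$^{\ast }$--lower semi--continuous.
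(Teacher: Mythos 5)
Your proof is correct and follows essentially the same route as the paper: part (i) from the weak$^{\ast }$--lower semi--continuity of $g_{\mathfrak{m}}$ and compactness of $E_{1}$, and part (ii) by transporting a minimizer of $g_{\mathfrak{m}}$ into $\mathit{\Omega }_{\mathfrak{m}}^{\sharp }$ via the ergodic approximating sequence of Lemma \ref{lemma property reduced free--energy} (ii), the identity $f_{\mathfrak{m}}^{\sharp }=g_{\mathfrak{m}}$ on $\mathcal{E}_{1}$, the equality of infima, and finally the convexity and weak$^{\ast }$--compactness of $\mathit{\Omega }_{\mathfrak{m}}^{\sharp }$. One cosmetic point: the set of minimizers is weak$^{\ast }$--closed because it is the sublevel set $\{g_{\mathfrak{m}}\leq \inf g_{\mathfrak{m}}(E_{1})\}$, not merely because it is the preimage of a single value (preimages of points under lower semi--continuous maps need not be closed).
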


\begin{proof}%
The assertion (i) is a direct consequence of the weak$^{\ast }$--lower
semi--continuity of the functional $g_{\mathfrak{m}}$ (Lemma \ref{lemma
property reduced free--energy} (i)) together with the weak$^{\ast }$%
--compacticity of $E_{1}$. The second one results from Lemmata \ref{lemma
property reduced free--energy} (ii), \ref{lemma property free--energy
density functional copy(1)} and \ref{lemma minimum sympa copy(1)}. Indeed,
by Lemma \ref{lemma property reduced free--energy} (ii), for any $\omega \in 
\mathit{\hat{M}}_{\mathfrak{m}}$, there is a sequence $\{\hat{\rho}%
_{n}\}_{n=1}^{\infty }\subseteq \mathcal{E}_{1}$ of ergodic states
converging in the weak$^{\ast }$--topology to $\omega $ with the property
that $g_{\mathfrak{m}}(\hat{\rho}_{n})=f_{\mathfrak{m}}^{\sharp }(\hat{\rho}%
_{n})$ converges to $g_{\mathfrak{m}}(\omega )$ as $n\rightarrow \infty $.
Since by Lemma \ref{lemma property free--energy density functional copy(1)}, 
$g_{\mathfrak{m}}(\omega )$ is also the infimum of the functional $f_{%
\mathfrak{m}}^{\sharp }$ over $E_{1}$, we obtain that $\omega \in \mathit{%
\Omega }_{\mathfrak{m}}^{\sharp }$, see Definition \ref{definition
equilibirum state}. As a consequence, the second assertion (ii) holds
because $\mathit{\Omega }_{\mathfrak{m}}^{\sharp }$ is convex and weak$%
^{\ast }$--compact by Lemma \ref{lemma minimum sympa copy(1)}. 
\end{proof}%

Definition \ref{definition equilibirum state} seems to be a more reasonable
way of defining generalized t.i. equilibrium states. Indeed, $\mathit{\hat{M}%
}_{\mathfrak{m}}$ is generally not convex because the functional $g_{%
\mathfrak{m}}$ is generally not convex provided that $\Phi _{a,-}\neq 0$
(a.e.) or $\Phi _{a,-}^{\prime }\neq 0$ (a.e.). Hence, we have, in general,
only one inclusion: $\mathit{\hat{M}}_{\mathfrak{m}}\subseteq \mathit{\Omega 
}_{\mathfrak{m}}^{\sharp }$. In fact, we show in Theorem \ref{theorem
structure of omega copy(1)} (i) that the weak$^{\ast }$--closed convex hull
of $\mathit{\hat{M}}_{\mathfrak{m}}$ equals $\mathit{\Omega }_{\mathfrak{m}%
}^{\sharp }$. The equality $\mathit{\hat{M}}_{\mathfrak{m}}=\mathit{\Omega }%
_{\mathfrak{m}}^{\sharp }$ holds for purely repulsive long--range models for
which $\Phi _{a,-}=\Phi _{a,-}^{\prime }=0$ (a.e.), see Theorem \ref{theorem
purement repulsif sympa} ($+$).

\begin{remark}[Generalized t.i. ground states]
\mbox{ }\newline
\index{States!generalized ground}All results concerning generalized t.i.
equilibrium states are performed at finite temperature, i.e., at fixed $%
\beta \in (0,\infty )$. However, each\ weak$^{\ast }$--limit\ point $\omega $%
\ of\ the\ sequence\ of\ states\ $\omega ^{(n)}\in \mathit{\Omega }_{%
\mathfrak{m}_{i}}^{\sharp }$\ of\ models\ $\{\mathfrak{m}_{n}\}_{n\in 
\mathbb{N}}$ in $\mathcal{M}_{1}$ such that $\beta _{n}\rightarrow \infty $
and $\mathfrak{m}_{n}\rightarrow \mathfrak{m}\in \mathcal{M}_{1}$ can be
seen as a generalized t.i. ground state of $\mathfrak{m}$. An analysis of
generalized t.i. ground states is not performed here, but it essentially
uses the same kind of arguments as for $\mathit{\Omega }_{\mathfrak{m}%
}^{\sharp }$, see, e.g., \cite[Section 6.2]{BruPedra1}.
\end{remark}

\section{Structure of the set $\mathit{\Omega }_{\mathfrak{m}}^{\sharp }$ of
generalized t.i. equilibrium states\label{Section equilibrium states copy(1)}%
}

\index{States!generalized equilibrium}By Lemma \ref{lemma property
free--energy density functional} (i) recall that the free--energy density
functional $f_{\mathfrak{m}}^{\sharp }$ is affine but generally not weak$%
^{\ast }$--lower semi--continuous, even on the set $E_{1}$ of t.i. states as
explained in Sections \ref{existence of thermodynamics} and \ref{Section
equilibrium states}. The variational problem 
\begin{equation*}
\mathrm{P}_{\mathfrak{m}}^{\sharp }=-\inf\limits_{\rho \in E_{1}}\,f_{%
\mathfrak{m}}^{\sharp }(\rho )
\end{equation*}%
given in Theorem \ref{BCS main theorem 1} (i) is, however, not as difficult
as it may look like provided it is attacked in the right way.

Indeed, since we are interested in \emph{global} (possibly approximating)
t.i. minimizers of $f_{\mathfrak{m}}^{\sharp }$ (cf. Definition \ref%
{definition equilibirum state}), it is natural to introduce its $\Gamma $%
--regularization $\Gamma _{E_{1}}(f_{\mathfrak{m}}^{\sharp })$ on $E_{1}$,
that is, for all $\rho \in E_{1}$, 
\begin{equation}
\Gamma _{E_{1}}(f_{\mathfrak{m}}^{\sharp })\left( \rho \right) :=\sup
\left\{ m(\rho ):m\in \mathrm{A}\left( \mathcal{U}^{\ast }\right) \;%
\text{and }m|_{E_{1}}\leq f_{\mathfrak{m}}^{\sharp }|_{E_{1}}\right\}
\label{gamma+}
\end{equation}%
with $\mathrm{A}\left( \mathcal{U}^{\ast }\right) $ being the set of all
affine and weak$^{\ast }$--continuous functions on the dual space $\mathcal{U%
}^{\ast }$ of the $C^{\ast }$--algebra $\mathcal{U}$. See also Definition %
\ref{gamm regularisation} in Section \ref{Section gamma regularization}.
Indeed, for all $\mathfrak{m}\in \mathcal{M}_{1}$,%
\index{Gamma--regularization!free--energy density functional} 
\begin{equation*}
\inf\limits_{\rho \in E_{1}}\,f_{\mathfrak{m}}^{\sharp }(\rho
)=\inf\limits_{\rho \in E_{1}}\,\Gamma _{E_{1}}(f_{\mathfrak{m}}^{\sharp
})(\rho ),
\end{equation*}%
see Theorem \ref{theorem trivial sympa 1} (i). The functional $\Gamma
_{E_{1}}(f_{\mathfrak{m}}^{\sharp })$ has the advantage of being a weak$%
^{\ast }$--lower semi--continuous convex functional, see Section \ref%
{Section gamma regularization}. As a consequence, $\Gamma _{E_{1}}(f_{%
\mathfrak{m}}^{\sharp })$ possesses minimizers and only (usual) minimizers
over the set $E_{1}$ as any sequence of approximating t.i. minimizers of $%
\Gamma _{E_{1}}(f_{\mathfrak{m}}^{\sharp })$ automatically converges to a
minimizer of this functional over $E_{1}$. In fact, the set of minimizers of 
$\,\Gamma _{E_{1}}(f_{\mathfrak{m}}^{\sharp })$ coincides with the set $%
\mathit{\Omega }_{\mathfrak{m}}^{\sharp }$ of generalized minimizers of $f_{%
\mathfrak{m}}^{\sharp }$, see Lemma \ref{lemma minimum sympa copy(1)} and
Theorem \ref{theorem trivial sympa 1} (ii). Hence, we shall describe $\Gamma
_{E_{1}}(f_{\mathfrak{m}}^{\sharp })$ in more details.

The free--energy density functional $f_{\mathfrak{m}}^{\sharp }$ is the sum
of maps (\ref{map cool}) (with $+$) and (\ref{map2}). From Theorem \ref%
{Lemma1.vonN} (v), the $\Gamma $--regularization of $\Delta _{a,+}$ on $%
E_{1} $ is the weak$^{\ast }$--lower semi--continuous convex map 
\begin{equation}
\rho \mapsto |\gamma _{a,+}\rho \left( \mathfrak{e}_{\Phi _{a}}+i\mathfrak{e}%
_{\Phi _{a}^{\prime }}\right) |_{2}^{2},
\label{gamma regularization de delta+}
\end{equation}%
(cf. (\ref{eq:enpersite})), whereas the map (\ref{map2}) on $E_{1}$ equals
its $\Gamma $--regularization on $E_{1}$ because (\ref{map2}) is a weak$%
^{\ast }$--lower semi--continuous convex functional (cf. Corollary \ref%
{Biconjugate}). Therefore, we could try to replace the functional $\Delta
_{a,+}$ in $f_{\mathfrak{m}}^{\sharp }$ by its $\Gamma $--regularization (%
\ref{gamma regularization de delta+}). Doing this we denote by $f_{\mathfrak{%
m}}^{\flat }$ the real functional defined by 
\begin{equation}
f_{\mathfrak{m}}^{\flat }\left( \rho \right) :=\Vert \gamma _{a,+}\rho
\left( \mathfrak{e}_{\Phi _{a}}+i\mathfrak{e}_{\Phi _{a}^{\prime }}\right)
\Vert _{2}^{2}-\Vert \Delta _{a,-}\left( \rho \right) \Vert _{1}+e_{\Phi
}(\rho )-\beta ^{-1}s(\rho )  \label{convex functional g_m}
\end{equation}%
for all $\rho \in E_{1}$. However, we can \emph{not} expect that the
functional $f_{\mathfrak{m}}^{\flat }$ is, \emph{in all cases}\footnote{%
In fact, $f_{\mathfrak{m}}^{\flat }=g_{\mathfrak{m}}=\Gamma _{E_{1}}(f_{%
\mathfrak{m}}^{\sharp })$ when $\Phi _{a,-}=0$ (a.e.), see proof of Theorem %
\ref{theorem structure of omega copy(1)}.}, equal to the $\Gamma $%
--regularization $\Gamma _{E_{1}}(f_{\mathfrak{m}}^{\sharp })$ of $f_{%
\mathfrak{m}}^{\sharp }$ because\ the $\Gamma $--regularization $\Gamma
\left( h_{1}+h_{2}\right) $ of the sum of two functionals $h_{1}$ and $h_{2}$
is generally not equal to the sum $\Gamma \left( h_{1}\right) +\Gamma \left(
h_{2}\right) $.

In fact, the $\Gamma $--regularization $\Gamma _{K}(h)$ of any functional $h$
is its largest lower semi--continuous and convex minorant on $K$ (Corollary %
\ref{Biconjugate}) and as $f_{\mathfrak{m}}^{\flat }$ is a convex weak$%
^{\ast }$--lower semi--continuous functional (cf. Lemmata \ref{lemma
property entropy} (i), \ref{Th.en.func} (i) and \ref{delta reprentation
integral}), we have the inequalities 
\begin{equation}
f_{\mathfrak{m}}^{\flat }\left( \rho \right) \leq \Gamma _{E_{1}}(f_{%
\mathfrak{m}}^{\sharp })\left( \rho \right) \leq f_{\mathfrak{m}}^{\sharp
}\left( \rho \right)  \label{inequality extra}
\end{equation}%
for all $\rho \in E_{1}$. The first inequality is \emph{generally strict}.
This can easily be seen by using, for instance, any model $\mathfrak{m}\in 
\mathcal{M}_{1}$ such that%
\begin{equation*}
\left\Vert \Delta _{a,-}\left( \rho \right) \right\Vert _{1}=\left\Vert
\Delta _{a,+}\left( \rho \right) \right\Vert _{1}
\end{equation*}%
for all $\rho \in E_{1}$. As a consequence, the variational problem 
\begin{equation}
\mathrm{P}_{\mathfrak{m}}^{\flat }:=-\inf\limits_{\rho \in E_{1}}f_{%
\mathfrak{m}}^{\flat }(\rho )  \label{pressure bemol}
\end{equation}%
is only a upper bound of the pressure $\mathrm{P}_{\mathfrak{m}}^{\sharp }$,
i.e., $\mathrm{P}_{\mathfrak{m}}^{\flat }\geq \mathrm{P}_{\mathfrak{m}%
}^{\sharp }$.

Nevertheless, $\mathrm{P}_{\mathfrak{m}}^{\flat }$ is still an interesting
variational problem because it has a direct interpretation in terms of the
max--min variational problem $\mathrm{F}_{\mathfrak{m}}^{\flat }$ of the
thermodynamic game defined in Definition \ref{definition two--person
zero--sum game}, see Theorem \ref{theorem saddle point} ($\flat $).
Moreover, as $\Delta _{A}\left( 
\hat{\rho}\right) =|\hat{\rho}(A)|^{2}$ for any ergodic state $\hat{\rho}\in 
\mathcal{E}_{1}$ and $A\in \mathcal{U}$, we have that 
\begin{equation*}
f_{\mathfrak{m}}^{\flat }(\hat{\rho})=g_{\mathfrak{m}}(\hat{\rho})=f_{%
\mathfrak{m}}^{\sharp }(\hat{\rho})
\end{equation*}%
for all extreme states $\hat{\rho}\in \mathcal{E}_{1}$. By (\ref{inequality
extra}), it follows that $\Gamma _{E_{1}}(f_{\mathfrak{m}}^{\sharp })$
coincides on $\mathcal{E}_{1}$ with the explicit weak$^{\ast }$--lower
semi--continuous functional $g_{\mathfrak{m}}$ defined in Definition \ref%
{Reduced free energy}:%
\index{Gamma--regularization!free--energy density functional}%
\begin{equation}
\Gamma _{E_{1}}(f_{\mathfrak{m}}^{\sharp })(%
\hat{\rho})=g_{\mathfrak{m}}(\hat{\rho})=f_{\mathfrak{m}}^{\flat }(\hat{\rho}%
)=f_{\mathfrak{m}}^{\sharp }(\hat{\rho})  \label{equality idiote}
\end{equation}%
for any $\mathfrak{m}\in \mathcal{M}_{1}$ and all $\hat{\rho}\in \mathcal{E}%
_{1}$.

Since the set $\mathcal{E}_{1}$ of extreme points of $E_{1}$ is dense (cf.
Corollary \ref{lemma density of extremal points}), Equality (\ref{equality
idiote}) is a strong property on the functional $\Gamma _{E_{1}}(f_{%
\mathfrak{m}}^{\sharp })$. Indeed, by combining (\ref{equality idiote}) with
Lemma \ref{lemma property entropy}, Lemma \ref{lemma property free--energy
density functional}, Corollary \ref{Biconjugate}, Lemma \ref{Jensen
inequality}, and Theorem \ref{theorem trivial sympa 1}, we arrive at a
fundamental characterization of the set $\mathit{\Omega }_{\mathfrak{m}%
}^{\sharp }$ of generalized t.i. equilibrium states:

\begin{theorem}[Structure of the set $\mathit{\Omega }_{\mathfrak{m}%
}^{\sharp }$ for any $\mathfrak{m}\in \mathcal{M}_{1}$]
\index{States!generalized equilibrium}\label{theorem structure of omega
copy(1)}\mbox{ }\newline
\emph{(i)} 
\index{States!generalized equilibrium}The weak$^{\ast }$--compact and convex
set $\mathit{\Omega }_{\mathfrak{m}}^{\sharp }$ is the weak$^{\ast }$%
--closed convex hull of the weak$^{\ast }$--compact set $\mathit{%
\hat{M}}_{\mathfrak{m}}$ (\ref{definition minimizers of reduced free energy}%
), i.e., 
\begin{equation*}
\mathit{\Omega }_{\mathfrak{m}}^{\sharp }=\overline{\mathrm{co}(\mathit{\hat{%
M}}_{\mathfrak{m}})}.
\end{equation*}%
\emph{(ii)} The set $\mathcal{E}(\mathit{\Omega }_{\mathfrak{m}}^{\sharp })$
of extreme states of $\mathit{\Omega }_{\mathfrak{m}}^{\sharp }$ is included
in $\mathit{\hat{M}}_{\mathfrak{m}}$, i.e., 
\begin{equation*}
\mathcal{E}(\mathit{\Omega }_{\mathfrak{m}}^{\sharp })\subseteq \mathit{\hat{%
M}}_{\mathfrak{m}}.
\end{equation*}%
\emph{(iii)}\ For any $\omega \in \mathit{\Omega }_{\mathfrak{m}}^{\sharp }$%
, there is a probability measure $v_{\omega }$ on $\mathit{\Omega }_{%
\mathfrak{m}}^{\sharp }$ such that%
\begin{equation*}
v_{\omega }(\mathcal{E}(\mathit{\Omega }_{\mathfrak{m}}^{\sharp }))=1\mathrm{%
\quad }\text{and}\mathrm{\quad }\omega =\int_{\mathcal{E}(\mathit{\Omega }_{%
\mathfrak{m}}^{\sharp })}\mathrm{d}v_{\omega }(\hat{\omega})\;\hat{\omega}.
\end{equation*}
\end{theorem}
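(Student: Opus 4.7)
The plan is to prove (i), from which (ii) follows by Milman's converse to Krein--Milman and (iii) by Choquet's theorem applied to $\mathit{\Omega}_{\mathfrak{m}}^{\sharp}$. The inclusion $\overline{\mathrm{co}(\mathit{\hat{M}}_{\mathfrak{m}})} \subseteq \mathit{\Omega}_{\mathfrak{m}}^{\sharp}$ is already provided by Lemma \ref{lemma minimum sympa copy(2)} (ii), so the nontrivial task in (i) is the reverse inclusion.

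For this, I would fix $\omega \in \mathit{\Omega}_{\mathfrak{m}}^{\sharp}$ and a sequence $\{\rho_{n}\}_{n=1}^{\infty} \subseteq E_{1}$ with $\rho_{n} \to \omega$ in the weak$^{\ast}$--topology and $f_{\mathfrak{m}}^{\sharp}(\rho_{n}) \to \inf_{\rho \in E_{1}} f_{\mathfrak{m}}^{\sharp}(\rho)$. The ergodic decomposition (Theorem \ref{theorem choquet}) produces probability measures $\mu_{\rho_{n}}$ on $E_{1}$, concentrated on $\mathcal{E}_{1}$, with barycenter $\rho_{n}$, and Lemma \ref{lemma property free--energy density functional} (ii) rewrites
\[
f_{\mathfrak{m}}^{\sharp}(\rho_{n}) \;=\; \int_{\mathcal{E}_{1}} g_{\mathfrak{m}}(\hat{\rho})\, \mathrm{d}\mu_{\rho_{n}}(\hat{\rho}).
\]
Since $E_{1}$ is weak$^{\ast}$--compact and metrizable (Theorem \ref{Metrizability}), the set of Borel probability measures on $E_{1}$ is sequentially weak$^{\ast}$--compact, so along a subsequence $\mu_{\rho_{n}} \to \nu$ weakly for some Borel probability measure $\nu$ on $E_{1}$. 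Weak$^{\ast}$--continuity of $\hat{\rho} \mapsto \hat{\rho}(A)$ for each $A \in \mathcal{U}$ yields that the barycenter of $\nu$ equals $\omega$. At the same time, the Portmanteau theorem applied to the bounded weak$^{\ast}$--lower semi--continuous functional $g_{\mathfrak{m}}$ (Lemma \ref{lemma property reduced free--energy} (i)) gives
\[
\int g_{\mathfrak{m}}\, \mathrm{d}\nu \;\leq\; \liminf_{n \to \infty} \int g_{\mathfrak{m}}\, \mathrm{d}\mu_{\rho_{n}} \;=\; \inf_{\rho \in E_{1}} g_{\mathfrak{m}}(\rho),
\]
the last equality coming from Theorem \ref{BCS main theorem 1} (i). Combined with the pointwise bound $g_{\mathfrak{m}} \geq \inf_{E_{1}} g_{\mathfrak{m}}$, this forces $\nu$ to be concentrated on $\mathit{\hat{M}}_{\mathfrak{m}}$, so $\omega \in \overline{\mathrm{co}(\mathit{\hat{M}}_{\mathfrak{m}})}$, which completes (i).

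Assertion (ii) then follows at once: since $\mathit{\hat{M}}_{\mathfrak{m}}$ is weak$^{\ast}$--compact (and hence closed), Milman's partial converse to Krein--Milman applied to the identity in (i) yields $\mathcal{E}(\mathit{\Omega}_{\mathfrak{m}}^{\sharp}) \subseteq \mathit{\hat{M}}_{\mathfrak{m}}$. For (iii), the convex set $\mathit{\Omega}_{\mathfrak{m}}^{\sharp}$ is weak$^{\ast}$--compact by Lemma \ref{lemma minimum sympa copy(1)} and metrizable as a subset of $E_{1}$, so Choquet's theorem furnishes the desired representation on $\mathcal{E}(\mathit{\Omega}_{\mathfrak{m}}^{\sharp})$. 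The main obstacle is the second paragraph: one must simultaneously extract a weak$^{\ast}$--convergent subsequence of the ergodic decompositions $\{\mu_{\rho_{n}}\}$, pin down its barycenter by testing against the affine weak$^{\ast}$--continuous functionals $\hat{\rho} \mapsto \hat{\rho}(A)$, and localize its support onto $\mathit{\hat{M}}_{\mathfrak{m}}$ via the lower semi--continuity of $g_{\mathfrak{m}}$. The interplay between the non--lower semi--continuous functional $f_{\mathfrak{m}}^{\sharp}$ and its reduced version $g_{\mathfrak{m}}$ is exactly what makes the integral identity of Lemma \ref{lemma property free--energy density functional} (ii) indispensable here.
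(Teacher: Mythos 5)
Your proof is correct, but it follows a genuinely different route from the paper's. The paper proves (i)--(ii) by first establishing the identity $\Gamma _{E_{1}}(f_{\mathfrak{m}}^{\sharp })=\Gamma _{E_{1}}(g_{\mathfrak{m}})$ of the two $\Gamma $--regularizations (one inequality via the approximation of any t.i. state by ergodic states along which $g_{\mathfrak{m}}$ converges, the other via Jensen's inequality combined with the integral decomposition of $f_{\mathfrak{m}}^{\sharp }$), and then invoking the abstract Theorems \ref{theorem trivial sympa 1} and \ref{theorem trivial sympa 1 copy(1)}, which identify the minimizers of $\Gamma _{K}(h)$ with $\overline{\mathrm{co}(\mathit{\Omega }(h,K))}$ and the extreme points of that set with a subset of $\overline{\mathit{\Omega }(h,K)}$; applied to $h=f_{\mathfrak{m}}^{\sharp }$ this set is $\mathit{\Omega }_{\mathfrak{m}}^{\sharp }$, and applied to $h=g_{\mathfrak{m}}$ (lower semi--continuous, so its generalized minimizers are exactly $\mathit{\hat{M}}_{\mathfrak{m}}$) it yields the claim. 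You instead argue directly: you push the ergodic decompositions $\mu _{\rho _{n}}$ of a net of approximating minimizers through the sequential weak$^{\ast }$--compactness of $M_{1}^{+}(E_{1})$, identify the barycenter of the limit measure $\nu $ as $\omega $ by testing against $\hat{\rho}\mapsto \hat{\rho}(A)$, and localize the support of $\nu $ onto $\mathit{\hat{M}}_{\mathfrak{m}}$ via the Portmanteau inequality for the bounded lower semi--continuous $g_{\mathfrak{m}}$ together with $\inf f_{\mathfrak{m}}^{\sharp }(E_{1})=\inf g_{\mathfrak{m}}(E_{1})$; Theorem \ref{thm barycenter} then places $\omega $ in $\overline{\mathrm{co}(\mathit{\hat{M}}_{\mathfrak{m}})}$, and (ii), (iii) follow from Milman and Choquet exactly as in the paper. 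Your argument is more elementary and self--contained, avoiding the Urysohn--lemma construction behind Theorem \ref{theorem trivial sympa 1} entirely (the paper's Remark \ref{remark Minimization of real functionals} acknowledges a bypass exists, but via Lanford--Robinson, not via your measure--compactness argument); what the paper's route buys is the identity $\Gamma _{E_{1}}(f_{\mathfrak{m}}^{\sharp })=\Gamma _{E_{1}}(g_{\mathfrak{m}})$ and the general minimization theorems, both of which are reused elsewhere (e.g.\ in Theorem \ref{eq.tang.bcs.type} and Theorem \ref{theorem omega simplex copy(1)}).
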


\begin{proof}
We first prove that $\Gamma _{E_{1}}(f_{\mathfrak{m}}^{\sharp })=\Gamma
_{E_{1}}(g_{\mathfrak{m}})$ on $E_{1}$. We start by showing that $\Gamma
_{E_{1}}(f_{\mathfrak{m}}^{\sharp })$ is a lower bound for $\Gamma
_{E_{1}}(g_{\mathfrak{m}})$.

For any $\rho \in E_{1}$, there is, by Lemma \ref{lemma property reduced
free--energy} (ii), a sequence $\{\hat{\rho}_{n}\}_{n=1}^{\infty }\subseteq 
\mathcal{E}_{1}$ of ergodic states converging in the weak$^{\ast }$%
--topology to $\rho $ and such that $g_{\mathfrak{m}}(\hat{\rho}_{n})$
converges to $g_{\mathfrak{m}}(\rho )$. By (\ref{equality idiote}), it
follows that $\Gamma _{E_{1}}(f_{\mathfrak{m}}^{\sharp })(\hat{\rho}_{n})$
also converges to $g_{\mathfrak{m}}(\rho )$. Moreover, as $\Gamma
_{E_{1}}(f_{\mathfrak{m}}^{\sharp })$ is weak$^{\ast }$--lower
semi--continuous on $E_{1}$, 
\begin{equation}
\Gamma _{E_{1}}(f_{\mathfrak{m}}^{\sharp })(\rho )\leq \ \underset{%
n\rightarrow \infty }{\lim }\Gamma _{E_{1}}(f_{\mathfrak{m}}^{\sharp })(\hat{%
\rho}_{n})=g_{\mathfrak{m}}(\rho )  \label{equality idiotebis}
\end{equation}%
for any $\rho \in E_{1}$. Applying Corollary \ref{Biconjugate} for $h=g_{%
\mathfrak{m}}$, we deduce from (\ref{equality idiotebis}) that 
\begin{equation}
\Gamma _{E_{1}}(f_{\mathfrak{m}}^{\sharp })(\rho )\leq \Gamma _{E_{1}}(g_{%
\mathfrak{m}})(\rho )  \label{equality idiotebisbis}
\end{equation}%
for all $\rho \in E_{1}$. We show next the converse inequality.

Since the functional $\Gamma _{E_{1}}(g_{\mathfrak{m}})$ is convex, by using
Theorem \ref{theorem choquet} together with Jensen's inequality%
\index{Jensen's inequality} (Lemma \ref{Jensen inequality} with $h=\Gamma
_{E_{1}}(g_{\mathfrak{m}})$) and Lemma \ref{lemma property free--energy
density functional} (ii), we obtain that%
\begin{equation*}
\Gamma _{E_{1}}(g_{\mathfrak{m}})(\rho )\leq \int_{\mathcal{E}_{1}}\mathrm{d}%
\mu _{\rho }(%
\hat{\rho})g_{\mathfrak{m}}(\hat{\rho})=f_{\mathfrak{m}}^{\sharp }(\rho )
\end{equation*}%
for all $\rho \in E_{1}$, which, by Corollary \ref{Biconjugate}, implies the
inequality%
\begin{equation}
\Gamma _{E_{1}}(g_{\mathfrak{m}})(\rho )\leq \Gamma _{E_{1}}(f_{\mathfrak{m}%
}^{\sharp })(\rho )  \label{equality idiote2}
\end{equation}%
for all $\rho \in E_{1}$. Therefore, Inequalities (\ref{equality
idiotebisbis}) and (\ref{equality idiote2}) yield $\Gamma _{E_{1}}(g_{%
\mathfrak{m}})=\Gamma _{E_{1}}(f_{\mathfrak{m}}^{\sharp })$ 
\index{Gamma--regularization!free--energy density functional}on $E_{1}$.

We apply now Theorem \ref{theorem trivial sympa 1} to $K=E_{1}$ and $h=f_{%
\mathfrak{m}}^{\sharp }$ to show that the set of minimizers of $\Gamma
_{E_{1}}(f_{\mathfrak{m}}^{\sharp })$ over $E_{1}$ is the weak$^{\ast }$%
--closed convex hull of $\mathit{\Omega }_{\mathfrak{m}}^{\sharp }$. By
Lemma \ref{lemma minimum sympa copy(1)}, $\mathit{\Omega }_{\mathfrak{m}%
}^{\sharp }$ is a convex and weak$^{\ast }$--compact set. Hence, the set of
minimizers of $\Gamma _{E_{1}}(f_{\mathfrak{m}}^{\sharp })$ over $E_{1}$
equals $\mathit{\Omega }_{\mathfrak{m}}^{\sharp }$. Then, as $\Gamma
_{E_{1}}(g_{\mathfrak{m}})=\Gamma _{E_{1}}(f_{\mathfrak{m}}^{\sharp })$ on $%
E_{1}$, $\mathit{\Omega }_{\mathfrak{m}}^{\sharp }$ is also the set of
minimizers of $\Gamma _{E_{1}}(g_{\mathfrak{m}})$\ over $E_{1}$ and by
applying again Theorem \ref{theorem trivial sympa 1} (i)--(ii) and also
Theorem \ref{theorem trivial sympa 1 copy(1)} (i) to $K=E_{1}$ and $h=g_{%
\mathfrak{m}}^{\sharp }$ we get the assertions (i)--(ii).

The third statement (iii) is a consequence of the Choquet theorem%
\index{Choquet theorem} (see Theorem \ref{theorem choquet bis}) because\ the
set $\mathit{\Omega }_{\mathfrak{m}}^{\sharp }$ is convex, weak$^{\ast }$%
--compact (Lemma \ref{lemma minimum sympa copy(1)}), and metrizable by
Theorem \ref{Metrizability}. In particular, the equality 
\begin{equation*}
\omega =\int_{\mathcal{E}(\mathit{\Omega }_{\mathfrak{m}}^{\sharp })}\mathrm{%
d}v_{\omega }(%
\hat{\omega})\;\hat{\omega}
\end{equation*}%
means, by definition, that $\omega \in \mathit{\Omega }_{\mathfrak{m}%
}^{\sharp }$ is the barycenter of the probability measure, i.e., the
normalized positive Borel regular measure, $v_{\omega }$ on $\mathit{\Omega }%
_{\mathfrak{m}}^{\sharp }$, see Definition \ref{def barycenter} and Theorem %
\ref{thm barycenter}. 
\end{proof}%

\begin{remark}[Minimization of real functionals]
\label{remark Minimization of real functionals}\mbox{ }\newline
\index{Minimization of real functionals}Theorem \ref{theorem structure of
omega copy(1)} (i)--(ii) can be proven without Theorems \ref{theorem trivial
sympa 1}--\ref{theorem trivial sympa 1 copy(1)} by using Lemma \ref{lemma
minimum sympa copy(2)} combined with Lanford III -- Robinson theorem%
\index{Lanford III -- Robinson theorem} \cite[Theorem 1]{LanRob} (Theorem %
\ref{Land.Rob}) and Lemma \ref{lemma property free--energy density
functional copy(1)}. However, Theorems \ref{theorem trivial sympa 1}--\ref%
{theorem trivial sympa 1 copy(1)} -- which do not seem to have been proven
before -- are very useful results to analyze variational problems with
non--convex functionals on a compact convex set $K$. Indeed, the
minimization of any real functional $h$ over $K$ can be done in this case by
analyzing a variational problem related to a convex lower semi--continuous
functional $\Gamma _{K}\left( h\right) $ for which various methods are
available.
\end{remark}

Note that the integral representation (iii) in Theorem \ref{theorem
structure of omega copy(1)} may not be unique, i.e., $\mathit{\Omega }_{%
\mathfrak{m}}^{\sharp }$ may not be a Choquet simplex%
\index{Simplex!Choquet} (Definition \ref{gamm regularisation copy(3)}) in
contrast to all sets $E_{%
\vec{\ell}}$ for all $\vec{\ell}\in \mathbb{N}^{d}$, see Theorems \ref%
{theorem choquet} and \ref{Thm Poulsen simplex}. In Theorem \ref{theorem
omega simplex} we give some special (but yet physically relevant) cases for
which the sets $\mathit{\Omega }_{\mathfrak{m}}^{\sharp }$ are simplices.

\begin{remark}[Pure thermodynamic phases]
\label{remark face}\mbox{ }\newline
\index{Pure phases}From\ Theorem \ref{theorem structure of omega copy(1)},
we have in $\mathit{\Omega }_{\mathfrak{m}}^{\sharp }$ a notion of pure and
mixed thermodynamic phases (equilibrium states) by identifying purity with
extremality. If $\mathit{\Omega }_{\mathfrak{m}}^{\sharp }$ turns out to be
a face in $E_{1}$ (see, e.g., Theorem \ref{theorem purement repulsif sympa} (%
$-$)) then purity corresponds to ergodicity as $\mathcal{E}(\mathit{\Omega }%
_{\mathfrak{m}}^{\sharp })=\mathit{\Omega }_{\mathfrak{m}}^{\sharp }\cap 
\mathcal{E}_{1}$ in this special case.
\end{remark}

\begin{remark}[Gauge invariant t.i. equilibrium states]
\label{remark.eq.inv.gauge}\mbox{ }\newline
\index{States!gauge invariant equilibrium}If the model $\mathfrak{m}\in 
\mathfrak{\mathcal{M}}_{1}$ is 
\index{Gauge invariant!models}gauge invariant, which means that $U_{l}\in 
\mathcal{U}^{\circ }$ (cf. (\ref{definition of gauge invariant operators})),
then the set $\mathit{\Omega }_{\beta }^{\sharp ,\circ }:=\mathit{\Omega }_{%
\mathfrak{m}}^{\sharp }\cap E_{1}^{\circ }$ of 
\index{Gauge invariant!equilibrium states}gauge invariant t.i. equilibrium
states of $\mathfrak{m}$ is the weak$^{\ast }$--closed convex hull of the
(non--empty) set $\mathit{%
\hat{M}}_{\mathfrak{m}}\cap E_{1}^{\circ }$ and its set of extreme points
equals%
\begin{equation*}
\mathcal{E}(\mathit{\Omega }_{\beta }^{\sharp ,\circ })=\mathcal{E}(\mathit{%
\Omega }_{\mathfrak{m}}^{\sharp })\cap E_{1}^{\circ }\subseteq \mathit{\hat{M%
}}_{\mathfrak{m}}\cap E_{1}^{\circ },
\end{equation*}%
cf. Remark \ref{t.i. + gauge inv states}. This follows by using Theorem \ref%
{theorem structure of omega copy(1)} together with elementary arguments. We
omit the details.
\end{remark}

We conclude now this section by analyzing some effects of negative and
repulsive long--range interactions on the thermodynamics of models $%
\mathfrak{m}\in \mathfrak{\mathcal{M}}_{1}$, see Definition \ref{long range
attraction-repulsion}. In particular, we observe that long--range
attractions $\Phi _{a,-}$ and $\Phi _{a,-}^{\prime }$ have no important
effect on the structure of the set $\mathit{\Omega }_{\mathfrak{m}}^{\sharp
} $ of generalized t.i. equilibrium states which is, for all purely local
models $(\Phi ,0,0)\in \mathfrak{\mathcal{M}}_{1}$, a (non--empty) closed
face of $E_{1}$. By contrast, long--range repulsions $\Phi _{a,+}$ and $\Phi
_{a,+}^{\prime }$ have generally a \emph{geometrical} effect by possibly
breaking the face structure of the set $\mathit{\Omega }_{\mathfrak{m}%
}^{\sharp }$ of generalized t.i. equilibrium states. Indeed, we have the
following statements:

\begin{theorem}[$\mathit{\Omega }_{\mathfrak{m}}^{\sharp }$ when $\Phi
_{a,+}=\Phi _{a,+}^{\prime }=0$ or $\Phi _{a,-}=\Phi _{a,-}^{\prime }=0$]
\label{theorem purement repulsif sympa}\mbox{ }\newline
\emph{(}$-$\emph{) }%
\index{States!generalized equilibrium}If $\Phi _{a,+}=\Phi _{a,+}^{\prime
}=0 $ (a.e.) then $\mathrm{P}_{\mathfrak{m}}:=\mathrm{P}_{\mathfrak{m}%
}^{\sharp }=\mathrm{P}_{\mathfrak{m}}^{\flat }$ and $\mathit{\Omega }_{%
\mathfrak{m}}^{\sharp }=\mathit{M}_{\mathfrak{m}}^{\sharp }$ is a closed
face of the Poulsen simplex $E_{1}$.\newline
\emph{(}$+$\emph{) }If $\Phi _{a,-}=\Phi _{a,-}^{\prime }=0$ (a.e.) then $%
\mathrm{P}_{\mathfrak{m}}:=\mathrm{P}_{\mathfrak{m}}^{\sharp }=\mathrm{P}_{%
\mathfrak{m}}^{\flat }$ and $\mathit{\Omega }_{\mathfrak{m}}^{\sharp }=%
\mathit{%
\hat{M}}_{\mathfrak{m}}$ is the set of minimizers of the convex functional $%
g_{\mathfrak{m}}$ over $E_{1}$, cf. (\ref{definition minimizers of reduced
free energy}).
\end{theorem}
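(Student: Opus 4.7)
The plan is to show that in each case the vanishing of one type of long--range interaction collapses the free--energy density $f_{\mathfrak{m}}^{\sharp}$ onto the convex functional $f_{\mathfrak{m}}^{\flat}$ of (\ref{convex functional g_m}), reducing the theorem to well--behaved variational problems handled by previous results.

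In case $(-)$, the hypothesis $\Phi_{a,+}=\Phi_{a,+}^{\prime}=0$ almost everywhere forces $\gamma_{a,+}\rho(\mathfrak{e}_{\Phi_{a}}+i\mathfrak{e}_{\Phi_{a}'})=0$ a.e.\ for every $\rho\in E_{1}$ (cf.\ Definition~\ref{long range attraction-repulsion}), and likewise $\Delta_{a,+}(\rho)=0$ a.e. Consequently the $\|\Delta_{a,+}(\rho)\|_{1}$ summand in $f_{\mathfrak{m}}^{\sharp}$ and its counterpart in $f_{\mathfrak{m}}^{\flat}$ both vanish identically, so $f_{\mathfrak{m}}^{\sharp}=f_{\mathfrak{m}}^{\flat}$ on $E_{1}$, which already gives $\mathrm{P}_{\mathfrak{m}}^{\sharp}=\mathrm{P}_{\mathfrak{m}}^{\flat}$ by Theorem~\ref{BCS main theorem 1}~(i). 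The surviving terms form precisely the map (\ref{map2}), which is weak$^{\ast}$--lower semi--continuous by Lemmata~\ref{lemma property entropy}~(i), \ref{Th.en.func}~(i), and the weak$^{\ast}$--upper semi--continuity of the affine map $\rho\mapsto\|\Delta_{a,-}(\rho)\|_{1}$ (Lemma~\ref{delta reprentation integral}). Since $f_{\mathfrak{m}}^{\sharp}$ is therefore weak$^{\ast}$--lower semi--continuous on the weak$^{\ast}$--compact set $E_{1}$, the set $\mathit{M}_{\mathfrak{m}}^{\sharp}$ is non--empty and weak$^{\ast}$--closed; and for any $\omega\in\mathit{\Omega}_{\mathfrak{m}}^{\sharp}$ with approximating sequence $\{\rho_{n}\}\subseteq E_{1}$ as in (\ref{approximating minimizer}), the inequality $f_{\mathfrak{m}}^{\sharp}(\omega)\leq \liminf_{n\to\infty} f_{\mathfrak{m}}^{\sharp}(\rho_{n})=\inf f_{\mathfrak{m}}^{\sharp}(E_{1})$ forces $\omega\in\mathit{M}_{\mathfrak{m}}^{\sharp}$. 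Combined with the trivial inclusion $\mathit{M}_{\mathfrak{m}}^{\sharp}\subseteq\mathit{\Omega}_{\mathfrak{m}}^{\sharp}$ and Lemma~\ref{lemma minimum sympa copy(3)}, this yields $\mathit{\Omega}_{\mathfrak{m}}^{\sharp}=\mathit{M}_{\mathfrak{m}}^{\sharp}$ as a closed face of the Poulsen simplex $E_{1}$.

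In case $(+)$, the hypothesis $\Phi_{a,-}=\Phi_{a,-}^{\prime}=0$ almost everywhere kills the terms $\|\Delta_{a,-}(\rho)\|_{1}$ and $\|\gamma_{a,-}\rho(\mathfrak{e}_{\Phi_{a}}+i\mathfrak{e}_{\Phi_{a}'})\|_{2}^{2}$ in $f_{\mathfrak{m}}^{\flat}$ and $g_{\mathfrak{m}}$ respectively, so a direct comparison of Definition~\ref{Reduced free energy} with (\ref{convex functional g_m}) gives $g_{\mathfrak{m}}=f_{\mathfrak{m}}^{\flat}$ on $E_{1}$. This common functional is weak$^{\ast}$--lower semi--continuous by Lemma~\ref{lemma property reduced free--energy}~(i) and convex (the long--range term is a weak$^{\ast}$--continuous integral of squared moduli of affine maps, the local term $e_{\Phi}$ is affine, and $-\beta^{-1}s$ is convex by Lemma~\ref{lemma property entropy}~(i)). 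Hence the minimizer set $\mathit{\hat{M}}_{\mathfrak{m}}$ from (\ref{definition minimizers of reduced free energy}) is non--empty, weak$^{\ast}$--compact, and convex, so $\overline{\mathrm{co}(\mathit{\hat{M}}_{\mathfrak{m}})}=\mathit{\hat{M}}_{\mathfrak{m}}$, and Theorem~\ref{theorem structure of omega copy(1)}~(i) identifies $\mathit{\Omega}_{\mathfrak{m}}^{\sharp}=\mathit{\hat{M}}_{\mathfrak{m}}$. The pressure equality $\mathrm{P}_{\mathfrak{m}}^{\sharp}=\mathrm{P}_{\mathfrak{m}}^{\flat}$ then follows from the $g_{\mathfrak{m}}$--representation of the pressure in Theorem~\ref{BCS main theorem 1}~(i).

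The argument is largely mechanical once one observes which of $f_{\mathfrak{m}}^{\sharp}$, $f_{\mathfrak{m}}^{\flat}$, $g_{\mathfrak{m}}$ become tautologically equal in each case. The only mildly delicate point is the weak$^{\ast}$--lower semi--continuity argument in case $(-)$, which hinges on the upper semi--continuity of $\rho\mapsto\|\Delta_{a,-}(\rho)\|_{1}$; in the general case this would fail to cooperate with the repulsive long--range term $\|\Delta_{a,+}(\rho)\|_{1}$, which is precisely why such a clean dichotomy requires the present hypotheses.
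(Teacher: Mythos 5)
Your proof is correct and follows essentially the same route as the paper: in each case you identify which of $f_{\mathfrak{m}}^{\sharp}$, $f_{\mathfrak{m}}^{\flat}$, $g_{\mathfrak{m}}$ coincide once one long--range component vanishes, and then invoke weak$^{\ast}$--lower semi--continuity (resp.\ convexity plus Theorem \ref{theorem structure of omega copy(1)} (i)) to identify $\mathit{\Omega}_{\mathfrak{m}}^{\sharp}$. The only cosmetic difference is that in case $(-)$ you unpack the paper's ``obvious'' step directly via lower semi--continuity along approximating minimizers instead of citing the identity $f_{\mathfrak{m}}^{\sharp}=\Gamma_{E_{1}}(f_{\mathfrak{m}}^{\sharp})$, which amounts to the same thing.
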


\begin{proof}
In any case, $f_{\mathfrak{m}}^{\flat }$ is weak$^{\ast }$--lower
semi--continuous, see (\ref{convex functional g_m}). If $\Phi _{a,+}=\Phi
_{a,+}^{\prime }=0$ (a.e.) then $f_{\mathfrak{m}}^{\flat }=f_{\mathfrak{m}%
}^{\sharp }=\Gamma _{E_{1}}(f_{\mathfrak{m}}^{\sharp })$ is also affine, see
Definition \ref{Free-energy density long range} and (\ref{inequality extra}%
). Then the first assertion ($-$) is obvious.

If $\Phi _{a,-}=\Phi _{a,-}^{\prime }=0$ (a.e.) then $f_{\mathfrak{m}%
}^{\flat }=g_{\mathfrak{m}}$ and, by Theorem \ref{BCS main theorem 1} (i), $%
\mathrm{P}_{\mathfrak{m}}:=\mathrm{P}_{\mathfrak{m}}^{\sharp }=\mathrm{P}_{%
\mathfrak{m}}^{\flat }$. Moreover, the weak$^{\ast }$--lower
semi--continuous functional $g_{\mathfrak{m}}$ becomes convex when $\Phi
_{a,-}=\Phi _{a,-}^{\prime }=0$ (a.e.), see Definition \ref{Reduced free
energy} and (\ref{convex functional g_m}). As a consequence, the set $%
\mathit{\hat{M}}_{\mathfrak{m}}$ of minimizers of $f_{\mathfrak{m}}^{\flat
}=g_{\mathfrak{m}}$ over $E_{1}$ is convex and also weak$^{\ast }$--compact
because of Lemma \ref{lemma minimum sympa copy(2)} (i). Then applying
Theorem \ref{theorem structure of omega copy(1)} (i) we arrive at the second
assertion ($+$). 
\end{proof}%

If $\Phi _{a,-}=\Phi _{a,-}^{\prime }=0$ (a.e.) then $g_{\mathfrak{m}}=f_{%
\mathfrak{m}}^{\flat }$ can be strictly convex. As a consequence, its set $%
\mathit{\hat{M}}_{\mathfrak{m}}$ of minimizers over $E_{1}$ is, in general,
not a face, see Lemma \ref{lemma explosion l du mec copacabana2} in Section %
\ref{section breaking theroy}. This geometrical effect can lead to a \emph{%
long--range order} (LRO) implied by long--range repulsions, see Section \ref%
{Section ODLRO}.

\section{Gibbs states versus generalized equilibrium states\label{Section
Gibbs versus gen eq states}}

\index{States!Gibbs}The Gibbs equilibrium state is defined in Definition \ref%
{Gibbs.statebis} and equals the explicitly given state $\rho _{l}:=\rho
_{\Lambda _{l},U_{l}}$ (\ref{Gibbs.state}) because of Theorem \ref%
{passivity.Gibbs}, see Section \ref{Section Gibbs equilibrium states}. The
physical relevance of such a finite--volume equilibrium state is based --
among other things -- on the minimum free energy principle and the second
law of thermodynamics as explained in Section \ref{Section Gibbs equilibrium
states}: $\rho _{l}$ is a finite--volume thermal state at equilibrium. In
the same way, a generalized t.i. equilibrium state $\omega \in \mathit{%
\Omega }_{\mathfrak{m}}^{\sharp }$ represents an infinite--volume thermal
state at equilibrium. There are, however, important differences between the
finite--volume system and its thermodynamic limit:

\begin{itemize}
\item \emph{Non--uniqueness of generalized t.i. equilibrium states.} The
Gibbs equilibrium state is the unique minimizer in $E_{\Lambda }$ of the
finite--volume free--energy density (Theorem \ref{passivity.Gibbs}) but at
infinite--volume, $\omega \in \mathit{\Omega }_{\mathfrak{m}}^{\sharp }$ may
not be unique, see, e.g., \cite[Section 6.2]{BruPedra1}. Such a phenomenon
is found in symmetry broken quantum phases like the superconducting phase.
Mathematically, it is related to the fact that we leave the Fock space
representation of models to go to a representation--free formulation of
thermodynamic phases. Doing so we take advantage of the non--uniqueness of
the representation of the $C^{\ast }$--algebra $\mathcal{U}$, as stressed
for instance in \cite{Haag62,ThirWeh67,Emch} for the BCS model in
infinite--volume. This property is, indeed, necessary to get non--unique
generalized equilibrium states which imply phase transitions.

\item \emph{Space symmetry of generalized equilibrium states.} The Gibbs
equilibrium state minimizes the finite--volume free--energy density
functional over the set $E$ of all states (Theorem \ref{passivity.Gibbs}).
Observe that the Gibbs equilibrium state may possibly not converge to a t.i.
state in the thermodynamic limit. By contrast, generalized t.i. equilibrium
states $\omega \in \mathit{\Omega }_{\mathfrak{m}}^{\sharp }$ are weak$%
^{\ast }$--limit points of approximating minimizers of the free--energy
density functional $f_{\mathfrak{m}}^{\sharp }$ over the subset $%
E_{1}\subseteq E$ of t.i. states (Theorem \ref{BCS main theorem 1} (i)).
Indeed,\ the functional $f_{\mathfrak{m}}^{\sharp }$ is, a priori, only
well--defined on the set $E_{%
\vec{\ell}}$ (cf. Definition \ref{Free-energy density long range}).
Therefore, it only makes sense to speak about generalized $\mathbb{Z}_{\vec{%
\ell}}^{d}$--invariant equilibrium states. The translation invariance
property of interactions in every model $\mathfrak{m}\in \mathcal{M}_{1}$
ensures the existence of generalized t.i. equilibrium states ($\mathit{%
\Omega }_{\mathfrak{m}}^{\sharp }\neq \emptyset $), but it does not exclude
the existence of generalized $\mathbb{Z}_{\vec{\ell}}^{d}$--invariant
equilibrium states for $\vec{\ell}\neq (1,\cdots ,1)$. In other words, a
t.i. (physical) system can lead to periodic (non--translation invariant)
structures. This phenomenon can be an explanation of the appearance of
periodic superconducting phases as observed recently, see, e.g., \cite%
{LeBoeuf,Pfleiderer}. No comprehensive theory is available to explain such a
phenomenon and we will investigate this question in another paper by using
the present formalism, in particular the decomposition of generalized t.i.
equilibrium states w.r.t. generalized $\mathbb{Z}_{\vec{\ell}}^{d}$%
--invariant equilibrium states. Observe further that, by Theorem \ref{BCS
main theorem 1 copy(1)}, there is a natural extension%
\index{Free--energy density functional!long--range!extension} $\mathfrak{F}_{%
\mathfrak{m}}^{\sharp }$ (\ref{extension of fdiese}) of $f_{\mathfrak{m}%
}^{\sharp }$ on $E$ such that%
\index{Pressure!variational problems} 
\begin{equation*}
\mathrm{P}_{\mathfrak{m}}^{\sharp }=-\inf\limits_{\rho \in E}\,\mathfrak{F}_{%
\mathfrak{m}}^{\sharp }\left( \rho \right) =-\inf\limits_{\rho \in E_{%
\vec{\ell}}}\,f_{\mathfrak{m}}^{\sharp }(\rho )=-\inf\limits_{\rho \in
E_{1}}\,f_{\mathfrak{m}}^{\sharp }(\rho ).
\end{equation*}

So, the first equality could be used to define non--periodic generalized
equilibrium states for long--range systems.
\end{itemize}

\begin{remark}[Generalized $\mathbb{Z}_{\vec{\ell}}^{d}$--invariant
equilibrium states]
\mbox{ }\newline
\index{States!generalized equilibrium}Using periodically invariant
interactions, the set of generalized $\mathbb{Z}_{%
\vec{\ell}}^{d}$--invariant equilibrium states can be analyzed in the same
way we study $\mathit{\Omega }_{\mathfrak{m}}^{\sharp }$. In fact, we
restrict our analysis on t.i. Fermi systems, but all our studies can also be
done for models constructed from periodically invariant interactions.
\end{remark}

The Gibbs equilibrium state $\rho _{l}$, seen as a state either on the local
algebra $\mathcal{U}_{\Lambda _{l}}$ or on the whole algebra $\mathcal{U}$
by periodically extending\footnote{%
By the definition of interactions, $\rho _{l}$ is an even state and hence,
products of translates of $\rho _{l}$ are well--defined, see \cite[Theorem
11.2.]{Araki-Moriya}.} it (with period $(2l+1)$ in each direction of the
lattice $\mathfrak{L}$), should converge (possibly only along a subsequence)
to a minimum of the functional $\mathfrak{F}_{\mathfrak{m}}^{\sharp }$ (\ref%
{extension of fdiese}) over $E$. However, $\rho _{l}$ may not converge to a
generalized t.i. equilibrium state $\omega \in \mathit{\Omega }_{\mathfrak{m}%
}^{\sharp }$. By contrast, the \emph{space--averaged} t.i. Gibbs state%
\index{States!Gibbs!space--averaged t.i.}%
\begin{equation}
\hat{\rho}_{l}:=\frac{1}{|\Lambda _{l}|}\sum\limits_{x\in \Lambda _{l}}\rho
_{l}\circ \alpha _{x}\in E_{1}  \label{t.i. state rho l}
\end{equation}%
constructed from $\rho _{l}:=\rho _{\Lambda _{l},U_{l}}$ (\ref{Gibbs.state})
and the $\ast $--automorphisms $\{\alpha _{x}\}_{x\in \mathbb{Z}^{d}}$
defined on $\mathcal{U}$ by (\ref{transl}) always converges in the weak$%
^{\ast }$--topology to a generalized t.i. equilibrium state, see Theorem \ref%
{lemma limit averaging gibbs states}.

This can be seen by using a characterization of generalized t.i. equilibrium
states as \emph{tangent functionals}%
\index{Tangent functionals} to the pressure $\mathrm{P}_{\mathfrak{m}%
}^{\sharp }$. Indeed, by Definition \ref{Pressure}, the pressure $\mathrm{P}%
_{\mathfrak{m}}^{\sharp }$ is a map from $\mathcal{M}_{1}$ to $\mathbb{R}$
and, as a consequence, it defines by restriction a map 
\begin{equation}
\Phi \mapsto \mathrm{P}_{\mathfrak{m}}^{\sharp }\left( \Phi \right) :=%
\mathrm{P}_{\mathfrak{m}+(\Phi ,0,0)}^{\sharp }
\label{map pour def tangeant}
\end{equation}%
from the real Banach space $\mathcal{W}_{1}$ of t.i. interactions to $%
\mathbb{R}$ at any fixed $\mathfrak{m}\in \mathcal{M}_{1}$. By Theorem \ref%
{BCS main theorem 1} (ii), the map $\Phi \mapsto \mathrm{P}_{\mathfrak{m}%
}^{\sharp }\left( \Phi \right) $ is (norm) continuous and also convex
because it is the supremum over the family $\{\mathbb{A}(\rho )\}_{\rho \in
E_{1}}$ of affine maps 
\begin{equation*}
\Phi \mapsto \mathbb{A}(\rho )\,(\Phi ):=-\Vert \Delta _{a,+}\left( \rho
\right) \Vert _{1}+\Vert \Delta _{a,-}\left( \rho \right) \Vert _{1}-e_{\Phi
}(\rho )+\beta ^{-1}s(\rho )
\end{equation*}%
from $\mathcal{W}_{1}$ to $\mathbb{R}$. Therefore, by applying Theorem \ref%
{theorem trivial sympa 2} we observe that the pressure $\mathrm{P}_{%
\mathfrak{m}}^{\sharp }$ has on each point $\Phi \in \mathcal{W}_{1}$, at
least, one continuous \emph{tangent linear functional} in $\mathcal{W}%
_{1}^{\ast }$, see Definition \ref{tangent functional} in Section \ref%
{Section Legendre-Fenchel transform}.

By a slight abuse of notation, note that the set $E_{1}\subseteq \mathcal{U}%
^{\ast }$ of t.i. states can be seen as included in $\mathcal{W}_{1}^{\ast }$%
. Indeed, the energy density functional $e_{\Phi }$ defines an affine weak$%
^{\ast }$--homeomorphism $\rho \mapsto \mathbb{T}(\rho )$ from $E_{1}$ to $%
\mathcal{W}_{1}^{\ast }$ which is a norm--isometry defined for any $\rho \in
E_{1}$ by the linear continuous map 
\begin{equation*}
\Phi \mapsto \mathbb{T}(\rho )\,(\Phi ):=-e_{\Phi }(\rho )
\end{equation*}%
from $\mathcal{W}_{1}$ to $\mathbb{R}$. For more details, we recommend
Section \ref{Section state=functional on W}, in particular Lemma \ref%
{lemma.T}. For convenience, we ignore the distinction between $%
E_{1}\subseteq \mathcal{U}^{\ast }$ and $\mathbb{T}\left( E_{1}\right)
\subseteq \mathcal{W}_{1}^{\ast }$.

Using this view point, Theorem \ref{BCS main theorem 1} (i) says that the
map $\Phi \mapsto \mathrm{P}_{\mathfrak{m}}^{\sharp }\left( \Phi \right) $
is the \emph{Legendre--Fenchel transform}%
\index{Legendre--Fenchel transform} of the free--energy density functional $%
f_{\mathfrak{m}}^{\sharp }$ extended over the whole space $\mathcal{W}%
_{1}^{\ast }$, i.e., 
\begin{equation}
\mathrm{P}_{\mathfrak{m}}^{\sharp }\left( \Phi \right) :=\mathrm{P}_{%
\mathfrak{m}+(\Phi ,0,0)}^{\sharp }=(f_{\mathfrak{m}}^{\sharp })^{\ast
}(\Phi ),  \label{legendre transform free energy}
\end{equation}%
see Definitions \ref{extension of functional} and \ref{Legendre--Fenchel
transform}. Of course, the free--energy density functional $f_{\mathfrak{m}%
}^{\sharp }$ is seen here as a map from $E_{1}\subseteq \mathcal{W}%
_{1}^{\ast }$ to $\mathbb{R}$. As a consequence, the pressure $\mathrm{P}_{%
\mathfrak{m}}^{\sharp }$ is the Legendre--Fenchel transform $(f_{\mathfrak{m}%
}^{\sharp })^{\ast }(0)$ of $f_{\mathfrak{m}}^{\sharp }$ at $\Phi =0$ and it
is thus natural to identify the set of all continuous tangent functionals to 
$\Phi \mapsto \mathrm{P}_{\mathfrak{m}}^{\sharp }\left( \Phi \right) $ at $0$
with a set of t.i. states:

\begin{definition}[Set of tangent states to the pressure]
\label{definition tangent state BCS}\mbox{ }\newline
\index{States!tangent}%
\index{Tangent functionals}For $\beta \in (0,\infty )$ and any $\mathfrak{m}%
\in \mathcal{M}_{1}$, we define $\mathit{T}_{\mathfrak{m}}^{\sharp
}\subseteq E_{1}$ to be the set of t.i. states which are continuous tangent
functionals\footnote{%
Recall that we identify $\rho \in E_{1}$ with $\mathbb{T}\left( \rho \right)
\in \mathcal{W}_{1}^{\ast }$, cf. Lemma \ref{lemma.T}.} to the map $\Phi
\mapsto \mathrm{P}_{\mathfrak{m}}^{\sharp }\left( \Phi \right) $ at the
point $0\in \mathcal{W}_{1}$.
\end{definition}

Definitions \ref{definition equilibirum state} and \ref{definition tangent
state BCS} are, a priori, not equivalent to each other. In the special case
of purely local interactions $\Phi $, i.e., when $\mathfrak{m}=(\Phi ,0,0)$,
it is already known that%
\begin{equation}
\mathit{M}_{\Phi }:=\mathit{M}_{\mathfrak{m}}^{\sharp }=\mathit{\Omega }_{%
\mathfrak{m}}^{\sharp }=\mathit{T}_{\mathfrak{m}}^{\sharp }=:\mathit{T}%
_{\Phi }  \label{equivalence def equilibrium states}
\end{equation}%
for translation covariant potentials $\Phi $, see Remark \ref{remark general
interaction} and \cite[Theorem 12.10.]{Araki-Moriya}.

In fact, upon choosing $h=f_{\mathfrak{m}}^{\sharp }$ and $K=E_{1}$ for
which $\mathit{\Omega }(f_{\mathfrak{m}}^{\sharp },E_{1})=\mathit{\Omega }_{%
\mathfrak{m}}^{\sharp }$ is convex and weak$^{\ast }$--compact (Lemma \ref%
{lemma minimum sympa copy(1)}), Corollary \ref{theorem trivial sympa 3} says
that the set $\mathit{T}_{\mathfrak{m}}^{\sharp }$ of all continuous tangent
functionals equals the set $\mathit{\Omega }_{\mathfrak{m}}^{\sharp }$ of
generalized t.i. equilibrium states. In other words, Definitions \ref%
{definition equilibirum state} and \ref{definition tangent state BCS} turn
out to be equivalent:

\begin{theorem}[Generalized t.i. equilibrium states as tangent states]
\label{eq.tang.bcs.type}\mbox{ }\newline
For all $\mathfrak{m}\in \mathcal{M}_{1}$, $\mathit{T}_{\mathfrak{m}%
}^{\sharp }=\mathit{\Omega }_{\mathfrak{m}}^{\sharp }$.%
\index{States!generalized equilibrium}%
\index{Minimizers!tangent}
\end{theorem}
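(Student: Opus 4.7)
The plan is to recognize Theorem \ref{eq.tang.bcs.type} as a direct application of the abstract duality result Corollary \ref{theorem trivial sympa 3} from the appendix, specialized to $h = f_{\mathfrak{m}}^{\sharp}$ and the convex weak$^{\ast}$--compact set $K = E_{1}$. The bridge between the concrete statement and the abstract lemma is the identification (\ref{legendre transform free energy}), $\mathrm{P}_{\mathfrak{m}}^{\sharp}(\Phi) = (f_{\mathfrak{m}}^{\sharp})^{\ast}(\Phi)$, which exhibits the pressure as the Legendre--Fenchel transform of the free--energy density over $E_{1}\subseteq \mathcal{W}_{1}^{\ast}$ under the affine isometric embedding $\mathbb{T}$ of Lemma \ref{lemma.T}.

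First, I would rewrite the definition of $\mathit{T}_{\mathfrak{m}}^{\sharp}$ explicitly: $\rho \in E_{1}$ belongs to $\mathit{T}_{\mathfrak{m}}^{\sharp}$ iff, for every $\Phi \in \mathcal{W}_{1}$,
\begin{equation*}
\mathrm{P}_{\mathfrak{m}}^{\sharp}(\Phi) - \mathrm{P}_{\mathfrak{m}}^{\sharp}(0) \geq \mathbb{T}(\rho)(\Phi) = -e_{\Phi}(\rho).
\end{equation*}
Using $\mathrm{P}_{\mathfrak{m}}^{\sharp}(\Phi) = (f_{\mathfrak{m}}^{\sharp})^{\ast}(\Phi)$ and $\mathrm{P}_{\mathfrak{m}}^{\sharp}(0) = -\inf f_{\mathfrak{m}}^{\sharp}(E_{1})$ from Theorem \ref{BCS main theorem 1}, this inequality is equivalent to $\rho$ being in the subdifferential of the biconjugate of $f_{\mathfrak{m}}^{\sharp}$ at $0$, which by standard convex analysis coincides with the set of minimizers over $E_{1}$ of the $\Gamma$--regularization $\Gamma_{E_{1}}(f_{\mathfrak{m}}^{\sharp})$.

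Next, I would verify the hypotheses of Corollary \ref{theorem trivial sympa 3}: $E_{1}$ is weak$^{\ast}$--compact and convex (in fact the Poulsen simplex, Theorem \ref{Thm Poulsen simplex}); $f_{\mathfrak{m}}^{\sharp}$ is affine on $E_{1}$ (Lemma \ref{lemma property free--energy density functional} (i)); and the set of generalized minimizers $\mathit{\Omega}_{\mathfrak{m}}^{\sharp}$ is nonempty, convex, and weak$^{\ast}$--compact (Lemma \ref{lemma minimum sympa copy(1)}). Together with Theorem \ref{theorem structure of omega copy(1)}, which identifies $\mathit{\Omega}_{\mathfrak{m}}^{\sharp}$ as precisely the minimizer set of $\Gamma_{E_{1}}(f_{\mathfrak{m}}^{\sharp})$ over $E_{1}$, Corollary \ref{theorem trivial sympa 3} yields $\mathit{T}_{\mathfrak{m}}^{\sharp} = \mathit{\Omega}_{\mathfrak{m}}^{\sharp}$, completing the proof.

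The only delicate point — and the place I would be most careful — is matching the dual pairings. One must check that the embedding $\mathbb{T}: E_{1} \hookrightarrow \mathcal{W}_{1}^{\ast}$ (where $\mathbb{T}(\rho)(\Phi) = -e_{\Phi}(\rho)$) is compatible with the dual pairing implicit in the abstract corollary, so that tangency of $\mathbb{T}(\rho)$ to the convex continuous map $\Phi \mapsto \mathrm{P}_{\mathfrak{m}}^{\sharp}(\Phi)$ corresponds exactly to $\rho$ being a minimizer of $\Gamma_{E_{1}}(f_{\mathfrak{m}}^{\sharp})$. Apart from this translation, no new analytical work is needed: all the hard content — the variational representation of the pressure, the affinity of $f_{\mathfrak{m}}^{\sharp}$, and the structure of $\mathit{\Omega}_{\mathfrak{m}}^{\sharp}$ — has already been established in the preceding sections.
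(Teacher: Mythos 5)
Your proposal is correct and follows essentially the same route as the paper: the text preceding the theorem applies Corollary \ref{theorem trivial sympa 3} with $h=f_{\mathfrak{m}}^{\sharp }$ and $K=E_{1}$, using the Legendre--Fenchel identification (\ref{legendre transform free energy}), the embedding $\mathbb{T}$ of Lemma \ref{lemma.T}, and the convexity and weak$^{\ast }$--compactness of $\mathit{\Omega }_{\mathfrak{m}}^{\sharp }$ from Lemma \ref{lemma minimum sympa copy(1)} so that $\overline{\mathrm{co}(\mathit{\Omega }_{\mathfrak{m}}^{\sharp })}=\mathit{\Omega }_{\mathfrak{m}}^{\sharp }$. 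The only detail worth adding is an explicit reference to Lemma \ref{lemma property free--energy density functional copy(1)} for the lower bound $f_{\mathfrak{m}}^{\sharp }\geq \mathrm{k}>-\infty $ required by the hypotheses of the corollary.
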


The equivalence of Definitions \ref{definition equilibirum state} and \ref%
{definition tangent state BCS} -- in the special case of local models $%
\mathfrak{m}=(\Phi ,0,0)$ -- has been proven, for instance, in \cite[Theorem
12.10.]{Araki-Moriya} or in \cite[Proof of Theorem 6.2.42.]%
{BrattelliRobinson} for quantum spin systems by using two results of convex
analysis: Mazur theorem \cite{Mazur}%
\index{Mazur theorem} and Lanford III -- Robinson theorem%
\index{Lanford III -- Robinson theorem} \cite[Theorem 1]{LanRob}, see
Theorems \ref{Mazur} and \ref{Land.Rob}. This method is standard, but highly
non trivial. In fact, as observed in \cite[Theorem I.6.6]{Simon}, the
approach of Theorem \ref{theorem trivial sympa 2}, which uses the
Legendre--Fenchel transform, is much easier.

Mazur theorem \cite{Mazur} (Theorem \ref{Mazur}) has an interesting
consequence on the instability of coexisting thermodynamic phases. Indeed,
thermodynamic phases are identified here with generalized t.i. equilibrium
states. From Theorem \ref{Mazur} and Remark \ref{Mazur remark} combined with
Theorem \ref{eq.tang.bcs.type}, the set of t.i. interactions in $\mathcal{W}%
_{1}$ having exactly one generalized t.i. equilibrium state is dense. Hence,
coexistence of thermodynamic phases is unstable in the sense that they can
be destroyed by arbitrarily small (w.r.t. the norm $\Vert \,\cdot \,\Vert _{%
\mathcal{W}_{1}}$) perturbations of the local interaction $\Phi $ of $%
\mathfrak{m}\in \mathcal{M}_{1}$. This phenomenon is well--known within the
case of purely local models, see, e.g., \cite[Observation 2, p. 303]%
{BrattelliRobinson} for the case of quantum spin systems.

We are now in position to prove that the space--averaged t.i. Gibbs state $%
\hat{\rho}_{l}$ defined by (\ref{t.i. state rho l}) always converges in the
weak$^{\ast }$--topology to a generalized t.i. equilibrium state:

\begin{theorem}[Weak$^{\ast }$--limit of space--averaged t.i. Gibbs states]
\label{lemma limit averaging gibbs states}\mbox{ }\newline
For any $\mathfrak{m}\in \mathcal{M}_{1}$, the weak$^{\ast }$--accumulation
points of the sequence $\{\hat{\rho}_{l}\}_{l\in \mathbb{N}}$ of ergodic
states $\hat{\rho}_{l}\in \mathcal{E}_{1}$ belong to the set $\mathit{\Omega 
}_{\mathfrak{m}}^{\sharp }$ of generalized t.i. equilibrium states.%
\index{States!generalized equilibrium}%
\index{States!Gibbs!space--averaged t.i.}
\end{theorem}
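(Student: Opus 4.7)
The plan is to identify each weak$^{\ast}$--accumulation point $\omega$ of $\{\hat{\rho}_{l}\}_{l\in\mathbb{N}}$ as a continuous tangent functional to the pressure $\Phi \mapsto \mathrm{P}_{\mathfrak{m}}^{\sharp}(\Phi)$ at $\Phi = 0 \in \mathcal{W}_{1}$, and then invoke Theorem \ref{eq.tang.bcs.type} to conclude that $\omega \in \mathit{T}_{\mathfrak{m}}^{\sharp} = \mathit{\Omega}_{\mathfrak{m}}^{\sharp}$. The starting point is the Bogoliubov (Peierls--Bogoliubov) convexity inequality applied to the finite--volume partition functions. Writing $U_{l}$ for the internal energy of $\mathfrak{m}$ and $U_{l}+U_{\Lambda_{l}}^{\Phi}$ for that of $\mathfrak{m}+(\Phi,0,0)$, the inequality yields, for every $\Phi \in \mathcal{W}_{1}$ and every $l\in\mathbb{N}$,
\begin{equation*}
p_{l,\mathfrak{m}+(\Phi,0,0)} - p_{l,\mathfrak{m}} \;\geq\; -\frac{1}{|\Lambda_{l}|}\,\rho_{l}\!\left(U_{\Lambda_{l}}^{\Phi}\right),
\end{equation*}
where $\rho_{l}$ is the local Gibbs state associated with $U_{l}$.

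Next I would compare $|\Lambda_{l}|^{-1}\rho_{l}(U_{\Lambda_{l}}^{\Phi})$ with $e_{\Phi}(\hat{\rho}_{l}) = \hat{\rho}_{l}(\mathfrak{e}_{\Phi})$. By translation invariance of $\Phi$ and the definition of the energy observable $\mathfrak{e}_{\Phi}$ in (\ref{eq:enpersite}), the difference $U_{\Lambda_{l}}^{\Phi} - \sum_{x\in\Lambda_{l}}\alpha_{x}(\mathfrak{e}_{\Phi})$ is supported on lattice sets intersecting both $\Lambda_{l}$ and its complement, hence its norm is controlled by a boundary sum which, using $\|\Phi\|_{\mathcal{W}_{1}}<\infty$ and the Van Hove property $|\partial\Lambda_{l}|/|\Lambda_{l}|\to 0$, is $o(|\Lambda_{l}|)$. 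Rewriting the translation--averaged sum via $\hat{\rho}_{l}$, one then obtains
\begin{equation*}
\frac{1}{|\Lambda_{l}|}\,\rho_{l}\!\left(U_{\Lambda_{l}}^{\Phi}\right) = e_{\Phi}(\hat{\rho}_{l}) + o(1), \qquad l\to\infty .
\end{equation*}

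Combining the two estimates and passing to a subsequence along which $\hat{\rho}_{l}\to\omega$ weak$^{\ast}$, the left--hand side converges to $\mathrm{P}_{\mathfrak{m}+(\Phi,0,0)}^{\sharp}-\mathrm{P}_{\mathfrak{m}}^{\sharp}$ by Theorem \ref{BCS main theorem 1}, while the right--hand side converges to $-e_{\Phi}(\omega)$ by weak$^{\ast}$--continuity of $e_{\Phi}$ on $E_{1}$ (Lemma \ref{Th.en.func} (i)). This yields
\begin{equation*}
\mathrm{P}_{\mathfrak{m}}^{\sharp}(\Phi) - \mathrm{P}_{\mathfrak{m}}^{\sharp}(0) \;\geq\; -e_{\Phi}(\omega) \;=\; \mathbb{T}(\omega)(\Phi)
\end{equation*}
for every $\Phi \in \mathcal{W}_{1}$, which under the identification of Section \ref{Section state=functional on W} exactly says that $\omega$ is a continuous tangent functional to $\mathrm{P}_{\mathfrak{m}}^{\sharp}$ at $0$. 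Definition \ref{definition tangent state BCS} together with Theorem \ref{eq.tang.bcs.type} then gives $\omega \in \mathit{T}_{\mathfrak{m}}^{\sharp} = \mathit{\Omega}_{\mathfrak{m}}^{\sharp}$, proving the theorem.

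The main obstacle is the boundary/approximation step controlling $|\Lambda_{l}|^{-1}\rho_{l}(U_{\Lambda_{l}}^{\Phi})-e_{\Phi}(\hat{\rho}_{l})$ uniformly for arbitrary $\Phi \in \mathcal{W}_{1}$ rather than merely finite--range interactions; but since this estimate only uses the norm $\|\Phi\|_{\mathcal{W}_{1}}$ of the \emph{perturbation} and not long--range features of $\mathfrak{m}$, it reduces to a standard Van Hove--type computation and presents no serious difficulty. The rest is a direct application of convexity inequalities, the weak$^{\ast}$--continuity of $e_{\Phi}$, and the two structural results Theorem \ref{BCS main theorem 1} and Theorem \ref{eq.tang.bcs.type} already at our disposal.
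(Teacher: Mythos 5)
Your proposal is correct and follows essentially the same route as the paper's proof: the passivity/Bogoliubov convexity inequality for the perturbed pressures, identification of the limit of $|\Lambda_{l}|^{-1}\rho_{l}(U_{\Lambda_{l}}^{\Phi})$ with $e_{\Phi}(\omega)$ via the space--averaged state, and the conclusion through tangent functionals and Theorem \ref{eq.tang.bcs.type}. The only (immaterial) difference is that the paper carries out the energy--comparison step for finite--range $\Phi\in\mathcal{W}_{1}^{\mathrm{f}}$ (Lemma \ref{lemma mean energy trivial-1}) and then extends the tangent inequality to all of $\mathcal{W}_{1}$ by density and norm--continuity of $\Phi\mapsto e_{\Phi}(\rho)$ and $\Phi\mapsto\mathrm{P}_{\mathfrak{m}}^{\sharp}(\Phi)$, whereas you treat general $\Phi$ directly via a dominated--convergence boundary estimate, which is the content of the bound used elsewhere in the paper for $\widehat{\mathfrak{e}}_{\Phi,l}-|\Lambda_{l}|^{-1}U_{\Lambda_{l}}^{\Phi}$.
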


\begin{proof}%
Note that $%
\hat{\rho}_{l}\in \mathcal{E}_{1}$ is an ergodic state, see the proof of
Corollary \ref{lemma density of extremal points}. Because $E_{1}$ is weak$%
^{\ast }$--compact and metrizable, the t.i. state $\hat{\rho}_{l}$ converges
in the weak$^{\ast }$--topology\ -- along a subsequence -- towards $\omega
\in E_{1}$. Therefore, since by Theorem \ref{eq.tang.bcs.type} $\mathit{T}_{%
\mathfrak{m}}^{\sharp }=\mathit{\Omega }_{\mathfrak{m}}^{\sharp } $, we need
to prove that $\omega \in \mathit{T}_{\mathfrak{m}}^{\sharp }$ is a
continuous tangent functionals to the map $\Phi \mapsto \mathrm{P}_{%
\mathfrak{m}}^{\sharp }\left( \Phi \right) $ (\ref{map pour def tangeant})
at the point $0\in \mathcal{W}_{1}$.

For any t.i. interaction $\Phi \in \mathcal{W}_{1}$, we use Theorem \ref%
{passivity.Gibbs} (passivity of Gibbs states)%
\index{Passivity of Gibbs states} to obtain the inequality%
\begin{equation}
p_{l,\mathfrak{m}+(\Phi ,0,0)}-p_{l,\mathfrak{m}}\geq -%
\frac{1}{|\Lambda _{l}|}\rho _{l}\left( U_{\Lambda _{l}}^{\Phi }\right) .
\label{petite inequality averaged state}
\end{equation}%
If $\Phi \in \mathcal{W}_{1}^{\mathrm{f}}$ is a finite range interaction
then Lemma \ref{lemma mean energy trivial-1} tells us that the mean internal
energy per volume $\rho _{l}(U_{\Lambda _{l}}^{\Phi })/|\Lambda _{l}|$ and
the energy density $e_{\Phi }(\hat{\rho}_{l})$ converge as $l\rightarrow
\infty $ to the same limit which is $e_{\Phi }\left( \omega \right) $
because of the weak$^{\ast }$--continuity of $e_{\Phi }$ (Lemma \ref%
{Th.en.func} (i)). Therefore, by combining (\ref{petite inequality averaged
state}) with Definition \ref{Pressure} and Lemma \ref{lemma mean energy
trivial-1} one gets that for all $\Phi \in \mathcal{W}_{1}^{\mathrm{f}}$ and 
$\mathfrak{m}\in \mathcal{M}_{1}$, 
\begin{equation}
\mathrm{P}_{\mathfrak{m}+(\Phi ,0,0)}^{\sharp }-\mathrm{P}_{\mathfrak{m}%
}^{\sharp }\geq -e_{\Phi }\left( \omega \right) .
\label{petite inequality averaged statebis}
\end{equation}%
By density of the space $\mathcal{W}_{1}^{\mathrm{f}}$ in $\mathcal{W}_{1}$
together with the continuity of the maps $\Phi \mapsto e_{\Phi }\left( \rho
\right) $ (Lemma \ref{Th.en.func} (ii)) and $\Phi \mapsto \mathrm{P}_{%
\mathfrak{m}}^{\sharp }\left( \Phi \right) $ (cf. (\ref{map pour def
tangeant}) and Theorem \ref{BCS main theorem 1} (ii)), we extend the
inequality (\ref{petite inequality averaged statebis}) to all t.i.
interactions $\Phi \in \mathcal{W}_{1}$, which means that $\omega \in 
\mathit{T}_{\mathfrak{m}}^{\sharp }=\mathit{\Omega }_{\mathfrak{m}}^{\sharp
} $ (Theorem \ref{eq.tang.bcs.type}). 
\end{proof}%

A sufficient condition to obtain the weak$^{\ast }$--convergence of the
Gibbs equilibrium state $\rho _{l}$ is to have a permutation invariant
model, see Chapter \ref{Stoermer}, in particular Definition \ref{Definition
permutation inv models} and Corollary \ref{limit Gibbs states}. In fact, the
convergence or non--convergence of the Gibbs equilibrium state $\rho _{l}$
drastically depends on the boundary conditions on the box $\Lambda _{l}$
which can break the translation invariance of the infinite--volume system.
If periodic boundary conditions (see Chapter \ref{section pbc}) are imposed,
i.e., the internal energy $\tilde{U}_{l}$ (Definition \ref{definition
BCS-type model periodized}) is defined to be translation invariant on the
torus $\Lambda _{l}$, then the Gibbs equilibrium state $\tilde{\rho}%
_{l}:=\rho _{\Lambda _{l},\tilde{U}_{l}}$ (\ref{Gibbs.state}) with periodic
boundary conditions and its space--average $\hat{\rho}_{l}$ have the same
weak$^{\ast }$--limit point and $\tilde{\rho}_{l}$ converges in the weak$%
^{\ast }$--topology to a generalized t.i. equilibrium state $\omega \in 
\mathit{\Omega }_{\mathfrak{m}}^{\sharp }$, see Theorem \ref{lemma limit
gibbs states periodic}.

We conclude now by another interesting consequence -- already observed by
Israel \cite[Theorem V.2.2.]{Israel} for quantum spin systems with purely
local interactions -- of Theorem \ref{eq.tang.bcs.type}. Indeed, we deduce
from Theorem \ref{eq.tang.bcs.type} that any finite set of extreme t.i.
states can be seen as a subset of $\mathit{\Omega }_{\mathfrak{m}}^{\sharp }$
for some model $\mathfrak{m}$:

\begin{corollary}[Generalized t.i. equilibrium ergodic states]
\label{corolaire Bishop phelps1}\mbox{ }\newline
Let $\mathfrak{m}\in \mathcal{M}_{1}$ such that $\mathit{\Omega }_{\mathfrak{%
m}+(\Phi ,0,0)}^{\sharp }$ is a face for all $\Phi \in \mathcal{W}_{1}$.
Then, for any subset $\left\{ \hat{\omega}_{1},\ldots ,\hat{\omega}%
_{n}\right\} $ of $\mathcal{E}_{1}$, there is $\Phi \in \mathcal{W}_{1}$
such that $\left\{ \hat{\omega}_{1},\ldots ,\hat{\omega}_{n}\right\}
\subseteq \mathit{\Omega }_{\mathfrak{m}+(\Phi ,0,0)}^{\sharp }$.
\end{corollary}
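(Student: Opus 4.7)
The plan is to reduce the problem to finding a single $\Phi$ for the barycentric state $\omega := n^{-1}(\hat{\omega}_{1}+\cdots+\hat{\omega}_{n})$ and then exploit the face hypothesis to propagate membership back to the individual $\hat{\omega}_{i}$.

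First, I would observe that $\omega \in E_{1}$ by convexity and that $\omega$ is written as a proper convex combination of the $\hat{\omega}_{i}$'s with strictly positive coefficients $1/n$. Suppose we have produced $\Phi \in \mathcal{W}_{1}$ with $\omega \in \mathit{\Omega}_{\mathfrak{m}+(\Phi,0,0)}^{\sharp}$. By the hypothesis, $\mathit{\Omega}_{\mathfrak{m}+(\Phi,0,0)}^{\sharp}$ is a face of $E_{1}$; applying the defining property of a face (as recalled in Remark \ref{remark face} and its footnote) to the identity $\omega = n^{-1}\hat{\omega}_{1}+\cdots+n^{-1}\hat{\omega}_{n}$, each $\hat{\omega}_{i}$ must belong to $\mathit{\Omega}_{\mathfrak{m}+(\Phi,0,0)}^{\sharp}$. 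This is the easy direction and it does not depend at all on the specifics of Fermi models with long--range interactions.

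Second, I would reduce the existence of such $\Phi$ to a statement about tangent functionals. By Theorem \ref{eq.tang.bcs.type}, $\mathit{\Omega}_{\mathfrak{m}+(\Phi,0,0)}^{\sharp} = \mathit{T}_{\mathfrak{m}+(\Phi,0,0)}^{\sharp}$, so under the identification $\mathbb{T}: E_{1}\hookrightarrow \mathcal{W}_{1}^{\ast}$ (Lemma \ref{lemma.T}), the membership $\omega \in \mathit{\Omega}_{\mathfrak{m}+(\Phi,0,0)}^{\sharp}$ is equivalent to $\mathbb{T}(\omega)$ being a continuous tangent functional to the convex continuous pressure map $\Phi' \mapsto \mathrm{P}_{\mathfrak{m}}^{\sharp}(\Phi')$ at the point $\Phi$. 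Via the identity (\ref{legendre transform free energy}), this is precisely the assertion that $\mathbb{T}(\omega)$ lies in the image of the subdifferential $\partial \mathrm{P}_{\mathfrak{m}}^{\sharp}$ over $\mathcal{W}_{1}$.

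Third, the existence of $\Phi$ with $\mathbb{T}(\omega) \in \partial \mathrm{P}_{\mathfrak{m}}^{\sharp}(\Phi)$ would be obtained along the lines of the Bishop--Phelps argument used by Israel in \cite[Theorem V.2.2.]{Israel} for the quantum spin case. Concretely, one considers the continuous convex functional $F_{\omega}(\Phi) := \mathrm{P}_{\mathfrak{m}}^{\sharp}(\Phi)+e_{\Phi}(\omega)$ on $\mathcal{W}_{1}$, which is bounded below (by $-\Gamma_{E_{1}}(f_{\mathfrak{m}}^{\sharp})(\omega) > -\infty$, using $\mathrm{P}_{\mathfrak{m}}^{\sharp} = (f_{\mathfrak{m}}^{\sharp})^{\ast}$ from (\ref{legendre transform free energy})), and one applies a Bishop--Phelps / Br\o ndsted--Rockafellar type support theorem to produce $\Phi$ where $F_{\omega}$ attains (or generalized--attains) its infimum; such a $\Phi$ is exactly what yields $\omega$ as a tangent functional. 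The main obstacle of the proof is precisely this step: the general Bishop--Phelps theorem delivers only a dense family of support points, so one must work with the generalized notion of minimizer built into Definition \ref{definition equilibirum state} (weak$^{\ast}$--limits of approximating minimizers) and use the weak$^{\ast }$--compactness of $E_{1}$ together with the continuity of the pressure (Theorem \ref{BCS main theorem 1} (ii)) to pass from approximate tangency to exact membership of $\omega$ in $\mathit{\Omega}_{\mathfrak{m}+(\Phi,0,0)}^{\sharp}$. Once this tangent/support step is secured, Steps 1--2 close the argument essentially for free.
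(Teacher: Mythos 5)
Your overall route --- pass to the barycenter $\omega :=n^{-1}(\hat{\omega}_{1}+\cdots +\hat{\omega}_{n})$, identify generalized t.i. equilibrium states with tangent functionals via Theorem \ref{eq.tang.bcs.type}, and invoke Bishop--Phelps as in Israel's \cite[Theorem V.2.2.]{Israel} --- is the same as the paper's, which simply cites Bishop--Phelps together with Theorem \ref{theorem choquet} and refers to Israel. But your third step has a genuine gap. Bishop--Phelps (or Br{\o}ndsted--Rockafellar) never produces a point $\Phi $ at which the \emph{prescribed} functional $\mathbb{T}(\omega )$ is tangent to the pressure: it produces, for each $\epsilon >0$, a point $\Phi _{\epsilon }$ and a tangent state $\sigma _{\epsilon }\in \mathit{T}_{\mathfrak{m}+(\Phi _{\epsilon },0,0)}^{\sharp }$ with $\Vert \sigma _{\epsilon }-\omega \Vert \leq \epsilon $, at the price $\Vert \Phi _{\epsilon }\Vert _{\mathcal{W}_{1}}=\mathcal{O}(1/\epsilon )$. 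Your proposed patch --- let $\epsilon \rightarrow 0$ and use weak$^{\ast }$--compactness of $E_{1}$ plus continuity of the pressure to upgrade approximate tangency to exact membership of $\omega $ --- cannot work, because the interactions $\Phi _{\epsilon }$ do not stay in a bounded set, so there is no single $\Phi $ at which to pass to the limit; and for a general t.i. state there is in fact \emph{no} $\Phi $ at which it is tangent, since attained subgradients are only dense among the dominated functionals.

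The ingredient you are missing is the second half of Theorem \ref{theorem choquet}: the ergodic decomposition $\rho \mapsto \mu _{\rho }$ is a norm \emph{isometry}. Fix $\epsilon <1/n$ once and for all and take the single pair $(\Phi ,\sigma )$ delivered by Bishop--Phelps, with $\sigma \in \mathit{\Omega }_{\mathfrak{m}+(\Phi ,0,0)}^{\sharp }$ and $\Vert \sigma -\omega \Vert <\epsilon $. Since $\mu _{\omega }=n^{-1}\sum_{i}\delta _{\hat{\omega}_{i}}$ and $\Vert \mu _{\sigma }-\mu _{\omega }\Vert =\Vert \sigma -\omega \Vert <1/n$, the measure $\mu _{\sigma }$ gives each $\hat{\omega}_{i}$ a strictly positive mass $\lambda _{i}>0$. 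Writing $\sigma =\lambda _{i}\hat{\omega}_{i}+(1-\lambda _{i})\sigma _{i}^{\prime }$ and using the hypothesis that $\mathit{\Omega }_{\mathfrak{m}+(\Phi ,0,0)}^{\sharp }$ is a face then yields $\hat{\omega}_{i}\in \mathit{\Omega }_{\mathfrak{m}+(\Phi ,0,0)}^{\sharp }$ for every $i$. Note that one never needs (and cannot directly obtain) $\omega \in \mathit{\Omega }_{\mathfrak{m}+(\Phi ,0,0)}^{\sharp }$ as an intermediate step; it follows only a posteriori by convexity. Your step 1 should therefore be rearranged: the face property is applied to the decomposition of the nearby tangent state $\sigma $, not to that of $\omega $.
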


\begin{proof}
The corollary follows from Bishop--Phelps' theorem together with Theorem \ref%
{theorem choquet}. The arguments are exactly those of Israel. Therefore, for
more details, we recommend \cite[Theorem V.2.2.]{Israel}. 
\end{proof}%

Note that the assumption of Corollary \ref{corolaire Bishop phelps1} is
satisfied, for instance, if the long--range part of the model $\mathfrak{m}%
\in \mathcal{M}_{1}$ is purely attractive, i.e., $\Phi _{a,+}=0$ (a.e.), see
Theorem \ref{theorem purement repulsif sympa} ($-$).

\section{Thermodynamics and game theory\label{Section thermo game}}

Effects of the long--range attractions $\Phi _{a,-},\Phi _{a,-}^{\prime }$
and repulsions $\Phi _{a,+},\Phi _{a,+}^{\prime }$ defined in Definition \ref%
{long range attraction-repulsion} \emph{are not symmetric} w.r.t.
thermodynamics as everything depends on variational problems given by
infima, see Theorem \ref{BCS main theorem 1} (i)%
\index{Long--range models!repulsions}%
\index{Long--range models!attractions}. For instance, the long--range
attractions $\Phi _{a,-}$ and $\Phi _{a,-}^{\prime }$ only reinforce the weak%
$^{\ast }$--lower semi--continuity of the free--energy density functional $%
f_{\mathfrak{m}}^{\sharp }$. In particular, if $\Phi _{a,+}=\Phi
_{a,+}^{\prime }=0$ (a.e.) then $\mathit{\Omega }_{\mathfrak{m}}^{\sharp }$
is, as for models $(\Phi ,0,0)\in \mathcal{M}_{1}$, a (non--empty) closed
face of $E_{1}$, see Theorem \ref{theorem purement repulsif sympa} ($-$). By
contrast, the long--range range repulsions $\Phi _{a,+}$ and $\Phi
_{a,+}^{\prime }$ have a stronger effect. Indeed, $\Phi _{a,+}$ and $\Phi
_{a,+}^{\prime }$ generally break the weak$^{\ast }$--lower semi--continuity
of the functional $f_{\mathfrak{m}}^{\sharp }$ on $E_{1}$ which, by
elementary arguments, yields, in general, to a non--affine functional $%
\Gamma _{E_{1}}(f_{\mathfrak{m}}^{\sharp })$. As a consequence, $\mathit{%
\Omega }_{\mathfrak{m}}^{\sharp }$ is generally not anymore a closed face of 
$E_{1}$, see Theorem \ref{theorem purement repulsif sympa} ($+$) and Lemma %
\ref{lemma explosion l du mec copacabana2} in Section \ref{section breaking
theroy}.

To understand this in more details, we use the view point of game theory and
interpret in Definition \ref{definition two--person zero--sum game} the
long--range attractions $\Phi _{a,-},\Phi _{a,-}^{\prime }$ and repulsions $%
\Phi _{a,+},\Phi _{a,+}^{\prime }$ of any model $\mathfrak{m}\in \mathcal{M}%
_{1}$ as attractive and repulsive players, respectively. This approach is
strongly related with the validity of the so--called Bogoliubov
approximation. In the context of the analysis of the thermodynamic pressure
of models $\mathfrak{m}\in \mathcal{M}_{1}^{\mathrm{df}}\subseteq \mathcal{M}%
_{1}$ (cf. Section \ref{definition models}) with discrete long--range part,
it is known as the \emph{approximating Hamiltonian method} \cite%
{Bogjunior,approx-hamil-method0,approx-hamil-method,approx-hamil-method2}%
\index{Approximating Hamiltonian method}, see Sections \ref{Section
historical overview} and \ref{Section approx method}. Beside our
interpretation of thermodynamics in terms of game theory, this method gives
a natural way to compute, from local interactions, the variational problems
given in Theorem \ref{BCS main theorem 1} (i) for the pressure $\mathrm{P}_{%
\mathfrak{m}}^{\sharp }$.

We show below that the pressure $\mathrm{P}_{\mathfrak{m}}^{\sharp }$ can be
studied for any models 
\begin{equation*}
\mathfrak{m}:=(\Phi ,\{\Phi _{a}\}_{a\in \mathcal{A}},\{\Phi _{a}^{\prime
}\}_{a\in \mathcal{A}})\in \mathcal{M}_{1}
\end{equation*}%
via a (Bogoliubov) min--max variational problem on the Hilbert space $L^{2}(%
\mathcal{A},\mathbb{C})$ of square integrable functions, which is
interpreted as the result of a two--person zero--sum game. Our proof
establishes, moreover, a clear link between the Bogoliubov min--max
principle for the pressure of long--range models and von Neumann min--max
theorem%
\index{von Neumann min--max theorem}. Functions $c_{a}\in L^{2}(\mathcal{A},%
\mathbb{C})$ are related to \emph{approximating interactions} defined as
follows:

\begin{definition}[Approximating interactions]
\label{definition BCS-type model approximated}\mbox{ }\newline
\index{Bogoliubov approximation}%
\index{Interaction!approximating}Approximating interactions of any model $%
\mathfrak{m}\in \mathcal{M}_{1}$ are t.i. interactions defined, for each $%
c_{a}\in L^{2}(\mathcal{A},\mathbb{C})$, by 
\begin{equation*}
\Phi (c_{a})=\Phi _{\mathfrak{m}}(c_{a}):=\Phi +2%
\func{Re}\left\{ \left\langle \Phi _{a}+i\Phi _{a}^{\prime },\gamma
_{a}c_{a}\right\rangle \right\} \in \mathcal{W}_{1}
\end{equation*}%
with $\left\langle \cdot ,\cdot \right\rangle $ being the scalar product
constructed in Section \ref{Section Preliminaries} for $\mathcal{X}=\mathcal{%
W}_{1}$ and $\gamma _{a}\in \{-1,1\}$ a fixed measurable function.
\end{definition}

Then, by Definition \ref{definition standard interaction}, the internal
energy $U_{\Lambda _{l}}^{\Phi (c_{a})}$ associated with the t.i.
interaction $\Phi (c_{a})$ equals%
\index{Interaction!approximating!internal energy}%
\begin{equation}
U_{l}(c_{a}):=U_{\Lambda _{l}}^{\Phi }+\int_{\mathcal{A}}\gamma _{a}\left\{
c_{a}(U_{\Lambda }^{\Phi _{a}}+iU_{\Lambda }^{\Phi _{a}^{\prime }})^{\ast }+%
\bar{c}_{a}(U_{\Lambda }^{\Phi _{a}}+iU_{\Lambda }^{\Phi _{a}^{\prime
}})\right\} \mathrm{d}\mathfrak{a}\left( a\right) .
\label{internal and surface energies approximated}
\end{equation}%
In particular, for any generalized t.i. equilibrium state $\omega \in 
\mathit{\Omega }_{\mathfrak{m}}^{\sharp }$ and any $c_{a}\in L^{2}(\mathcal{A%
},\mathbb{C})$, 
\begin{align}
& |\Lambda _{l}|^{-1}\omega (U_{l}-U_{l}(c_{a}))+\Vert c_{a,+}\Vert
_{2}^{2}-\Vert c_{a,-}\Vert _{2}^{2}  \label{finite volume gap eq0} \\
& \approx \int_{\mathcal{A}}\gamma _{a}\left\vert \left( |\Lambda
_{l}|^{-1}\omega (U_{\Lambda _{l}}^{\Phi _{a}}+iU_{\Lambda _{l}}^{\Phi
_{a}^{\prime }})-c_{a}\right) \right\vert ^{2}\mathrm{d}\mathfrak{a}\left(
a\right)  \notag
\end{align}%
with $c_{a,\pm }:=\gamma _{a,\pm }c_{a}$, where $\gamma _{a,\pm }\in \{0,1\}$
are the negative and positive parts (\ref{remark positive negative part
gamma}) of the fixed measurable function $\gamma _{a}$. The heuristic
(uncontrolled) approximation done in (\ref{finite volume gap eq0}) refers to
the ergodicity condition (A4) in the approximating Hamiltonian method
described in Section \ref{Section approx method}. See also \cite%
{approx-hamil-method}. Upon choosing 
\begin{equation}
c_{a}=d_{a}:=|\Lambda _{l}|^{-1}\omega (U_{\Lambda _{l}}^{\Phi
_{a}}+iU_{\Lambda _{l}}^{\Phi _{a}^{\prime }})+o(1)\mathrm{\quad (a.e.)}
\label{finite volume gap eq}
\end{equation}%
we observe that the energy densities 
\begin{equation*}
|\Lambda _{l}|^{-1}\omega (U_{l})\qquad \text{and}\qquad |\Lambda
_{l}|^{-1}\omega (U_{l}(d_{a}))
\end{equation*}%
only differ in the thermodynamic limit $l\rightarrow \infty $ by the
explicit constant 
\begin{equation*}
(\Vert d_{a,-}\Vert _{2}^{2}-\Vert d_{a,+}\Vert _{2}^{2}).
\end{equation*}%
In particular, by using the Bogoliubov (convexity) inequality \cite[%
Corollary D.4]{BruZagrebnov8}, we can expect that the approximating
interaction $\Phi (d_{a})\in \mathcal{W}_{1}$ highlights the thermodynamic
properties of models $\mathfrak{m}\in \mathcal{M}_{1}$.

\begin{remark}
Even if the order parameter $d_{a}\in L^{2}(\mathcal{A},\mathbb{C})$ is
shown to be generally not unique, these heuristic arguments are confirmed by
Theorem \ref{theorem saddle point} on the level of pressure, and by Theorems %
\ref{theorem structure of omega} on the level of states.
\end{remark}

Therefore, in order to understand the variational problems on the set $E_{1}$
given by Theorem \ref{BCS main theorem 1} (i) and more particularly the set $%
\mathit{\Omega }_{\mathfrak{m}}^{\sharp }$ of generalized t.i. equilibrium
states (Definition \ref{definition equilibirum state}), we introduce the
concept of \emph{approximating free--energy density functionals} whose
definition needs some preliminaries.

First, for any $c_{a}\in L^{2}(\mathcal{A},\mathbb{C})$, the finite--volume
pressure%
\begin{equation}
p_{l}\left( c_{a}\right) :=\frac{1}{\beta |\Lambda _{l}|}\ln \mathrm{Trace}%
_{\wedge \mathcal{H}_{\Lambda }}(\mathrm{e}^{-\beta U_{l}\left( c_{a}\right)
})  \label{pression approximated}
\end{equation}%
associated with the internal energy $U_{l}\left( c_{a}\right) $ (\ref%
{internal and surface energies approximated}) converges as $l\rightarrow
\infty $ to a well--defined (infinite--volume) pressure%
\index{Interaction!approximating!pressure} 
\begin{equation}
P_{\mathfrak{m}}\left( c_{a}\right) =-\inf\limits_{\rho \in E_{1}}\,f_{%
\mathfrak{m}}\left( \rho ,c_{a}\right)  \label{variational problem approx}
\end{equation}%
given by a variational problem over t.i. states, see Theorem \ref{BCS main
theorem 1} (i) or Proposition \ref{corrolaire sympa} in Section \ref{Section
Preliminaries copy(1)}. In comparison with the pressure $\mathrm{P}_{%
\mathfrak{m}}^{\sharp }$ for all $\mathfrak{m}\in \mathcal{M}_{1}$, $P_{%
\mathfrak{m}}\left( c_{a}\right) $ is, in practice, easier to compute
because it is associated with the (purely local) approximating interaction $%
\Phi (c_{a})$ (Definition \ref{definition BCS-type model approximated}).
Indeed, $P_{\mathfrak{m}}\left( c_{a}\right) $ is the pressure $\mathrm{P}%
_{\left( \Phi (c_{a}),0,0\right) }$ and the free--energy density functional $%
f_{\Phi (c_{a})}$ (see Definition \ref{Remark free energy density}) is equal
in this case to%
\index{Interaction!approximating!free--energy density} 
\begin{equation}
f_{\mathfrak{m}}\left( \rho ,c_{a}\right) :=2%
\func{Re}\left\{ \left\langle e_{\Phi _{a}}(\rho )+ie_{\Phi _{a}^{\prime
}}(\rho ),\gamma _{a}c_{a}\right\rangle \right\} +e_{\Phi }(\rho )-\beta
^{-1}s(\rho )  \label{free--energy density approximated 1}
\end{equation}%
for all $c_{a}\in L^{2}(\mathcal{A},\mathbb{C})$ and $\rho \in E_{1}$.

From Lemmata \ref{lemma property entropy} (i) and \ref{Th.en.func} (i), the
map $\rho \mapsto f_{\mathfrak{m}}\left( \rho ,c_{a}\right) $ from $E_{1}$
to $\mathbb{R}$ is weak$^{\ast }$--lower semi--continuous and affine. This
implies that the variational problem (\ref{variational problem approx})
leading to the pressure $P_{\mathfrak{m}}(c_{a})$ has a closed face of
minimizers (cf. Definition \ref{definition equilibirum state copy(1)}):

\begin{lemma}[Equilibrium states of approximating interactions]
\label{remark equilibrium state approches}\mbox{ }\newline
\index{Interaction!approximating!equilibrium states}For any $c_{a}\in L^{2}(%
\mathcal{A},\mathbb{C})$, the set $\mathit{M}_{\Phi \left( c_{a}\right) }=%
\mathit{\Omega }_{\Phi \left( c_{a}\right) }$ of t.i. equilibrium states of
the approximating interaction $\Phi (c_{a})$ is a (non--empty) closed face
of the Poulsen simplex $E_{1}$.
\end{lemma}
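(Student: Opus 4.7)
The plan is to reduce the statement to the already established theory for purely local models. First, I would verify that the approximating interaction $\Phi(c_a)$ is indeed an element of $\mathcal{W}_1$: by Definition \ref{definition BCS-type model approximated}, $\Phi(c_a)$ is the sum of $\Phi\in\mathcal{W}_1$ and the $\mathcal{W}_1$-valued Bochner--like integral $2\func{Re}\langle \Phi_a + i\Phi_a', \gamma_a c_a\rangle$. By the Cauchy--Schwarz inequality in $\mathcal{L}^2(\mathcal{A},\mathcal{W}_1)$ applied to the constant function $c_a\in L^2(\mathcal{A},\mathbb{C})$ and $\{\Phi_a\pm i\Phi_a'\}\in \mathcal{L}^2(\mathcal{A},\mathcal{W}_1)$, one obtains
\begin{equation*}
\Vert \Phi(c_a)\Vert_{\mathcal{W}_1} \leq \Vert\Phi\Vert_{\mathcal{W}_1} + 2\Vert c_a\Vert_2\bigl(\Vert\Phi_a\Vert_2+\Vert\Phi_a'\Vert_2\bigr)<\infty,
\end{equation*}
so that $\Phi(c_a)\in\mathcal{W}_1$. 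Hence the corresponding long--range model is $(\Phi(c_a),0,0)\in\mathcal{M}_1$ and its free--energy density functional $f_{\mathfrak{m}}^{\sharp}$ collapses, by Definition \ref{Free-energy density long range}, to the purely local functional $f_{\Phi(c_a)}$ of Definition \ref{Remark free energy density}, which agrees with the expression $f_{\mathfrak{m}}(\rho,c_a)$ in (\ref{free--energy density approximated 1}) once the scalar product is expanded via linearity of $e_{(\cdot)}$.

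Next, I would invoke the structural properties established earlier: by Lemma \ref{lemma property entropy} (i) the entropy density $s$ is weak$^{\ast}$--upper semi--continuous and affine on $E_1$, while by Lemma \ref{Th.en.func} (i) the energy density $e_{\Phi(c_a)}$ is weak$^{\ast}$--continuous and affine. Therefore $f_{\Phi(c_a)} = e_{\Phi(c_a)} - \beta^{-1}s$ is weak$^{\ast}$--lower semi--continuous and affine on $E_1$. Since $E_1$ is weak$^{\ast}$--compact (Section \ref{section set of states}), the infimum is attained and $\mathit{M}_{\Phi(c_a)}$ is non--empty. Weak$^{\ast}$--closedness of $\mathit{M}_{\Phi(c_a)}$ follows from the lower semi--continuity, and the face property follows from affinity via Lemma \ref{lemma minimum sympa copy(3)} applied to $(\Phi(c_a),0,0)$, combined with the observation that a closed set of minimizers of a weak$^{\ast}$--lower semi--continuous affine functional on a compact convex set is automatically a closed face.

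Finally, the equality $\mathit{M}_{\Phi(c_a)}=\mathit{\Omega}_{\Phi(c_a)}$ is exactly the content of (\ref{equivalence def equilibrium states}) specialized to $\mathfrak{m}=(\Phi(c_a),0,0)$: because $f_{\mathfrak{m}}^{\sharp}=f_{\Phi(c_a)}$ is weak$^{\ast}$--lower semi--continuous on the metrizable weak$^{\ast}$--compact set $E_1$, every weak$^{\ast}$--limit point of an approximating minimizing sequence is itself a minimizer, so the set of generalized t.i.\ equilibrium states cannot be strictly larger than $\mathit{M}_{\Phi(c_a)}$; the reverse inclusion is trivial from Definitions \ref{definition equilibirum state copy(1)} and \ref{definition equilibirum state}. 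There is essentially no obstacle here: the entire argument is a routine application of the results of Sections \ref{section entropie} and \ref{Section equilibrium states} once the membership $\Phi(c_a)\in\mathcal{W}_1$ is checked, which is the only mildly technical point and is dispatched by the Cauchy--Schwarz bound above.
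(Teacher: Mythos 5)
Your proposal is correct and follows essentially the same route as the paper: the paper derives this lemma directly from the observation that $\rho\mapsto f_{\mathfrak{m}}(\rho,c_a)=f_{\Phi(c_a)}(\rho)$ is weak$^{\ast}$--lower semi--continuous and affine (Lemmata \ref{lemma property entropy} (i) and \ref{Th.en.func} (i)), so that its set of minimizers over the weak$^{\ast}$--compact convex set $E_1$ is a non--empty closed face, with $\mathit{M}_{\Phi(c_a)}=\mathit{\Omega}_{\Phi(c_a)}$ as in (\ref{equivalence def equilibrium states}). Your additional Cauchy--Schwarz verification that $\Phi(c_a)\in\mathcal{W}_1$ is the content of the bound (\ref{Hahn banach toto}) underlying Definition \ref{definition BCS-type model approximated}, so nothing is missing.
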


\noindent For more details concerning the map $\left( \rho ,c_{a}\right)
\mapsto f_{\mathfrak{m}}\left( \rho ,c_{a}\right) $, see Proposition \ref%
{corrolaire sympa} in Section \ref{Section Preliminaries copy(1)}.

Second, we recall again that the thermodynamics of any model $\mathfrak{m}%
\in \mathcal{M}_{1}$ drastically depends on the sign of the coupling
constant 
\begin{equation*}
\gamma _{a}=\gamma _{a,+}-\gamma _{a,-}\in \{-1,1\},\quad \mathrm{where\ }%
\gamma _{a,\pm }:=1/2(|\gamma _{a}|\pm \gamma _{a}),
\end{equation*}%
see also (\ref{remark positive negative part gamma}). Thus, we define two
Hilbert spaces corresponding respectively to the long--range repulsions $%
\Phi _{a,+},\Phi _{a,+}^{\prime }$ and attractions $\Phi _{a,-},\Phi
_{a,-}^{\prime }$ of any model $\mathfrak{m}\in \mathcal{M}_{1}$: 
\begin{equation}
L_{\pm }^{2}(\mathcal{A},\mathbb{C}):=\left\{ c_{a,\pm }\in L^{2}(\mathcal{A}%
,\mathbb{C}):c_{a,\pm }=\gamma _{a,\pm }c_{a,\pm }\right\} .
\label{definition of positive-negative L2 space}
\end{equation}%
Note that we obviously have the equality%
\begin{equation*}
L^{2}(\mathcal{A},\mathbb{C})=L_{+}^{2}(\mathcal{A},\mathbb{C})\oplus
L_{-}^{2}(\mathcal{A},\mathbb{C}).
\end{equation*}%
Then we define the approximating free--energy density functional $\mathfrak{f%
}_{\mathfrak{m}}$ as follows:

\begin{definition}[Approximating free--energy density functional]
\label{definition approximating free--energy}\mbox{ }\newline
\index{Free--energy density functional!approximating}The approximating
free--energy density functional is the map 
\begin{equation*}
\mathfrak{f}_{\mathfrak{m}}:L_{-}^{2}(\mathcal{A},\mathbb{C})\times
L_{+}^{2}(\mathcal{A},\mathbb{C})\rightarrow \mathbb{R}
\end{equation*}%
defined for any $c_{a,\pm }\in L_{\pm }^{2}(\mathcal{A},\mathbb{C})$ by 
\begin{equation*}
\mathfrak{f}_{\mathfrak{m}}\left( c_{a,-},c_{a,+}\right) :=-\left\Vert
c_{a,+}\right\Vert _{2}^{2}+\left\Vert c_{a,-}\right\Vert _{2}^{2}-P_{%
\mathfrak{m}}\left( c_{a,-}+c_{a,+}\right) .
\end{equation*}
\end{definition}

This functional is analyzed in Lemma \ref{lemma chiant property map de base}
and is used to define the (two--person zero--sum)\emph{\ thermodynamic game}
with the so--called \emph{conservative values }$\mathrm{F}_{\mathfrak{m}%
}^{\flat }$ and $\mathrm{F}_{\mathfrak{m}}^{\sharp }$:

\begin{definition}[Thermodynamic game]
\label{definition two--person zero--sum game}\mbox{ }\newline
\index{Thermodynamic game|textbf}%
\index{Thermodynamic game!conservative values}%
\index{Thermodynamic game!worst loss functional}%
\index{Thermodynamic game!least gain functional}%
\index{Zero--sum games!two--person }The thermodynamic game is the
two--person zero--sum game defined from the functional $\mathfrak{f}_{%
\mathfrak{m}}$ with conservative values%
\begin{equation*}
\mathrm{F}_{\mathfrak{m}}^{\flat }:=\underset{c_{a,+}\in L_{+}^{2}(\mathcal{A%
},\mathbb{C})}{\sup }\mathfrak{f}_{\mathfrak{m}}^{\flat }\left(
c_{a,+}\right) \quad 
\text{and}\quad \mathrm{F}_{\mathfrak{m}}^{\sharp }:=\underset{c_{a,-}\in
L_{-}^{2}(\mathcal{A},\mathbb{C})}{\inf }\mathfrak{f}_{\mathfrak{m}}^{\sharp
}\left( c_{a,-}\right) ,
\end{equation*}%
where 
\begin{equation*}
\mathfrak{f}_{\mathfrak{m}}^{\flat }\left( c_{a,+}\right) :=\underset{%
c_{a,-}\in L_{-}^{2}(\mathcal{A},\mathbb{C})}{\inf }\mathfrak{f}_{\mathfrak{m%
}}\left( c_{a,-},c_{a,+}\right) ,\quad \mathfrak{f}_{\mathfrak{m}}^{\sharp
}\left( c_{a,-}\right) :=\underset{c_{a,+}\in L_{+}^{2}(\mathcal{A},\mathbb{C%
})}{\sup }\mathfrak{f}_{\mathfrak{m}}\left( c_{a,-},c_{a,+}\right) .
\end{equation*}
\end{definition}

\noindent Any function $c_{a,+}\in L_{+}^{2}(\mathcal{A},\mathbb{C})$ (resp. 
$c_{a,-}\in L_{-}^{2}(\mathcal{A},\mathbb{C})$) is interpreted as a \emph{%
strategy} of the \emph{repulsive} (resp. \emph{attractive}) player. $%
\mathfrak{f}_{\mathfrak{m}}^{\flat }$ is the \emph{least gain }functional of
the attractive player, whereas $\mathfrak{f}_{\mathfrak{m}}^{\sharp }$ is
called the \emph{worst loss} functional of the repulsive player. Minimizers
(resp. maximizers), if there are any, of $\mathfrak{f}_{\mathfrak{m}%
}^{\sharp }$ (resp. $\mathfrak{f}_{\mathfrak{m}}^{\flat }$) are the \emph{%
conservative strategies} of the attractive (resp. repulsive) player. For
more details concerning two--person zero--sum games, see Section \ref%
{Section two--person zero--sum games}.

In Section \ref{Section optimizations problems}, we prove that both
optimization problems $\mathrm{F}_{\mathfrak{m}}^{\flat }$ and $\mathrm{F}_{%
\mathfrak{m}}^{\sharp }$ are finite and the two optimizations of $\mathfrak{f%
}_{\mathfrak{m}}\left( c_{a,-},c_{a,+}\right) $ can be restricted to balls
in $L_{\pm }^{2}(\mathcal{A},\mathbb{C})$ of radius $R<\infty $, see Lemma %
\ref{lemma idiot interaction approx 2}. Moreover, the $\sup $ and $\inf $,
both in $\mathrm{F}_{\mathfrak{m}}^{\flat }$ and $\mathrm{F}_{\mathfrak{m}%
}^{\sharp }$, are attained, i.e., they are respectively a $\max $ and a $%
\min $ and the sets%
\index{Thermodynamic game!conservative strategies|textbf} 
\begin{equation}
\begin{array}{l}
\mathcal{C}_{\mathfrak{m}}^{\flat }:=\left\{ d_{a,+}\in L_{+}^{2}(\mathcal{A}%
,\mathbb{C}):\mathrm{F}_{\mathfrak{m}}^{\flat }=\mathfrak{f}_{\mathfrak{m}%
}^{\flat }\left( d_{a,+}\right) \right\} , \\ 
\mathcal{C}_{\mathfrak{m}}^{\sharp }:=\left\{ d_{a,-}\in L_{-}^{2}(\mathcal{A%
},\mathbb{C}):\mathrm{F}_{\mathfrak{m}}^{\sharp }=\mathfrak{f}_{\mathfrak{m}%
}^{\sharp }\left( d_{a,-}\right) \right\}%
\end{array}
\label{eq conserve strategy}
\end{equation}%
of conservative strategies of the repulsive and attractive players,
respectively, are non--empty. In fact, by Lemma \ref{lemma idiot interaction
approx 2}, the set $\mathcal{C}_{\mathfrak{m}}^{\flat }$ has exactly one
element $d_{a,+}$ if $\gamma _{a,+}\neq 0$ (a.e.), whereas $\mathcal{C}_{%
\mathfrak{m}}^{\sharp }$ is non--empty, norm--bounded, and weakly compact.

The conservative values $\mathrm{F}_{\mathfrak{m}}^{\flat }$ and $\mathrm{F}%
_{\mathfrak{m}}^{\sharp }$ of the thermodynamic game turn out to be
extremely useful to understand the thermodynamics of models $\mathfrak{m}\in 
\mathcal{M}_{1}$ as they have a direct interpretation in terms of
variational problems over the set $E_{1}$. Indeed, we prove in\ Section \ref%
{Section theorem saddle point bis} (cf. Lemmata \ref{lemma super} (i) and %
\ref{lemma super copy(1)}) the following theorem:

\begin{theorem}[Thermodynamics as\ a two--person zero--sum game]
\label{theorem saddle point}\mbox{ }\newline
\emph{(}$\flat $\emph{)} 
\index{Zero--sum games!two--person }%
\index{Thermodynamic game!pressure}$\mathrm{P}_{\mathfrak{m}}^{\flat }=-%
\mathrm{F}_{\mathfrak{m}}^{\flat }$ with the pressure $\mathrm{P}_{\mathfrak{%
m}}^{\flat }$ defined, for $\mathfrak{m}\in \mathcal{M}_{1}$, by the
minimization of the functional $f_{\mathfrak{m}}^{\flat }$ over $E_{1}$, see
(\ref{pressure bemol}).\newline
\emph{(}$\sharp $\emph{)} 
\index{Pressure!variational problems!zero--sum game}$\mathrm{P}_{\mathfrak{m}%
}^{\sharp }=-\mathrm{F}_{\mathfrak{m}}^{\sharp }$ with the pressure $\mathrm{%
P}_{\mathfrak{m}}^{\sharp }$ given, for $\mathfrak{m}\in \mathcal{M}_{1}$,
by the minimization of the functional $f_{\mathfrak{m}}^{\sharp }$ over $%
E_{1}$, see Definition \ref{Pressure} and Theorem \ref{BCS main theorem 1}
(i).
\end{theorem}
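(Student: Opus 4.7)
Proof proposal. Both equalities will be obtained by writing $\mathrm{F}_{\mathfrak{m}}^{\flat}$ and $\mathrm{F}_{\mathfrak{m}}^{\sharp}$ as nested extrema of the joint functional
\[
H(\rho,c_{a,-},c_{a,+}):=-\Vert c_{a,+}\Vert_{2}^{2}+\Vert c_{a,-}\Vert_{2}^{2}+2\mathrm{Re}\int \gamma_{a}\bar{c}_{a}\,z_{a}(\rho)\,\mathrm{d}\mathfrak{a}(a)+e_{\Phi}(\rho)-\beta^{-1}s(\rho),
\]
with $z_{a}(\rho):=\rho(\mathfrak{e}_{\Phi_{a}}+i\mathfrak{e}_{\Phi_{a}^{\prime}})$ and $c_{a}=c_{a,-}+c_{a,+}$, and then permuting extrema via Sion's minimax theorem and Bauer's maximum principle. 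By Theorem \ref{BCS main theorem 1}~(i) applied to the purely local approximating interaction $\Phi(c_{a})\in\mathcal{W}_{1}$ (Definition \ref{definition BCS-type model approximated}),
\[
\mathfrak{f}_{\mathfrak{m}}(c_{a,-},c_{a,+})=-\Vert c_{a,+}\Vert_{2}^{2}+\Vert c_{a,-}\Vert_{2}^{2}-P_{\mathfrak{m}}(c_{a})=\inf_{\rho\in E_{1}}H(\rho,c_{a,-},c_{a,+}).
\]
Completing the square in each of $c_{a,\pm}$ separately (the cross term vanishes since $\gamma_{a,+}$ and $\gamma_{a,-}$ have disjoint supports) gives the pointwise-in-$\rho$ identity
\[
\sup_{c_{a,+}\in L_{+}^{2}}\inf_{c_{a,-}\in L_{-}^{2}}H=\inf_{c_{a,-}}\sup_{c_{a,+}}H=\Vert \gamma_{a,+}z_{a}(\rho)\Vert_{2}^{2}-\Vert \gamma_{a,-}z_{a}(\rho)\Vert_{2}^{2}+e_{\Phi}(\rho)-\beta^{-1}s(\rho)=g_{\mathfrak{m}}(\rho).
\]

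For $(\sharp)$: freely swapping the two infima in the definition of $\mathrm{F}_{\mathfrak{m}}^{\sharp}$ yields $\mathrm{F}_{\mathfrak{m}}^{\sharp}=\inf_{c_{a,-}}\sup_{c_{a,+}}\inf_{\rho}H$, whereas the pointwise identity together with Theorem \ref{BCS main theorem 1}~(i) and Lemma \ref{lemma property free--energy density functional copy(1)} gives $-\mathrm{P}_{\mathfrak{m}}^{\sharp}=\inf_{\rho}g_{\mathfrak{m}}(\rho)=\inf_{c_{a,-}}\inf_{\rho}\sup_{c_{a,+}}H$. Hence it suffices to swap $\inf_{\rho}$ with $\sup_{c_{a,+}}$ at each fixed $c_{a,-}$, which is Sion's minimax theorem: $E_{1}$ is weak$^{\ast}$--compact convex, $\rho\mapsto H(\rho,c_{a,-},c_{a,+})$ is affine and weak$^{\ast}$--lower semi-continuous by Lemmata \ref{lemma property entropy}~(i) and \ref{Th.en.func}~(i), while $c_{a,+}\mapsto H(\rho,c_{a,-},c_{a,+})$ is norm-continuous and concave.

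For $(\flat)$: the analogous rearrangement yields $\mathrm{F}_{\mathfrak{m}}^{\flat}=\sup_{c_{a,+}}\inf_{\rho}G(c_{a,+},\rho)$ with
\[
G(c_{a,+},\rho):=\inf_{c_{a,-}}H=-\Vert c_{a,+}\Vert_{2}^{2}+2\mathrm{Re}\int \bar{c}_{a,+}z_{a}(\rho)\,\mathrm{d}\mathfrak{a}-\Vert \gamma_{a,-}z_{a}(\rho)\Vert_{2}^{2}+e_{\Phi}(\rho)-\beta^{-1}s(\rho).
\]
The quadratic term $-\Vert \gamma_{a,-}z_{a}(\rho)\Vert_{2}^{2}$ is only concave (not affine) in $\rho$, which obstructs Sion. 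The remedy is to pass to ergodic states and replace it by the affine weak$^{\ast}$--lsc envelope $-\Vert \Delta_{a,-}(\rho)\Vert_{1}$. Concretely, $G(c_{a,+},\cdot)$ is concave and weak$^{\ast}$--lsc on $E_{1}$ (continuity of $\rho\mapsto\Vert \gamma_{a,-}z_{a}(\rho)\Vert_{2}^{2}$ follows from dominated convergence since $\mathfrak{m}\in\mathcal{M}_{1}$), so Bauer's principle applied to the convex weak$^{\ast}$--usc functional $-G(c_{a,+},\cdot)$ gives $\inf_{\rho\in E_{1}}G=\inf_{\hat{\rho}\in\mathcal{E}_{1}}G$. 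On $\mathcal{E}_{1}$, ergodicity (Definition \ref{def:egodic} and Theorem \ref{theorem ergodic extremal}) turns $\Vert \gamma_{a,-}z_{a}(\hat{\rho})\Vert_{2}^{2}$ into $\Vert \Delta_{a,-}(\hat{\rho})\Vert_{1}$, so $G(c_{a,+},\hat{\rho})=\tilde{G}(c_{a,+},\hat{\rho})$ with
\[
\tilde{G}(c_{a,+},\rho):=-\Vert c_{a,+}\Vert_{2}^{2}+2\mathrm{Re}\int \bar{c}_{a,+}z_{a}(\rho)\,\mathrm{d}\mathfrak{a}-\Vert \Delta_{a,-}(\rho)\Vert_{1}+e_{\Phi}(\rho)-\beta^{-1}s(\rho)
\]
now affine and weak$^{\ast}$--lsc on all of $E_{1}$ (since $\Vert \Delta_{a,-}\Vert_{1}$ is affine weak$^{\ast}$--usc by Lemma \ref{delta reprentation integral}); a second application of Bauer gives $\inf_{\hat{\rho}\in\mathcal{E}_{1}}\tilde{G}=\inf_{\rho\in E_{1}}\tilde{G}$, so that $\mathrm{F}_{\mathfrak{m}}^{\flat}=\sup_{c_{a,+}}\inf_{\rho}\tilde{G}$. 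Sion's minimax now applies cleanly to $\tilde{G}$ (affine lsc in $\rho\in E_{1}$, concave continuous in $c_{a,+}$); completing the square in the inner $\sup_{c_{a,+}}\tilde{G}(c_{a,+},\rho)$ yields exactly $f_{\mathfrak{m}}^{\flat}(\rho)$ of (\ref{convex functional g_m}), so $\mathrm{F}_{\mathfrak{m}}^{\flat}=\inf_{\rho}f_{\mathfrak{m}}^{\flat}(\rho)=-\mathrm{P}_{\mathfrak{m}}^{\flat}$. The core obstacle is this $(\flat)$ case: the nonaffine concave term $-\Vert \gamma_{a,-}z_{a}\Vert_{2}^{2}$ blocks a direct minimax argument, and the detour through $\mathcal{E}_{1}$ (first out of $E_{1}$ by Bauer, then back into it via the affine envelope $-\Vert \Delta_{a,-}\Vert_{1}$ and Bauer again) is the decisive maneuver making Sion's theorem applicable.
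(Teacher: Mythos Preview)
Your proof is correct. The overall scheme---expressing both $\mathrm{F}_{\mathfrak{m}}^{\flat}$ and $\mathrm{F}_{\mathfrak{m}}^{\sharp}$ as nested extrema of the joint functional $H$ and then permuting extrema---matches the paper's Section~\ref{Section theorem saddle point bis}, but the two parts differ in how closely you follow it.

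For $(\flat)$ your argument is essentially the paper's (Lemma~\ref{lemma super}), reorganised. The decisive step is identical: to apply a minimax theorem one must trade the concave term $-\Vert\gamma_{a,-}z_a(\rho)\Vert_2^2$ for its affine weak$^{\ast}$--upper semicontinuous envelope $-\Vert\Delta_{a,-}(\rho)\Vert_1$, and this is achieved via the equality of both on $\mathcal{E}_1$. You perform the detour through $\mathcal{E}_1$ inline with two applications of Bauer's principle; the paper packages the same maneuver inside Proposition~\ref{Bogo interaction negative} (the purely attractive case) and then invokes von Neumann's min--max theorem where you invoke Sion's. These are cosmetic differences.

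For $(\sharp)$ your route is genuinely more direct than the paper's. The paper (Lemma~\ref{lemma super copy(1)}) first approximates the attractive part, commutes the two infima $\inf_{c_{a,-}}$ and $\inf_{\hat\rho\in\mathcal{E}_1}$, and then invokes the already-proven $(\flat)$ case through Corollary~\ref{corrolaire purement repulsif sympa bis} applied to the purely repulsive model $\mathfrak{m}(c_{a,-})$. You instead swap $\sup_{c_{a,+}}$ with $\inf_{\rho\in E_1}$ directly via Sion at each fixed $c_{a,-}$; this works because $H$ is affine weak$^{\ast}$--lower semicontinuous in $\rho$ and concave norm-continuous in $c_{a,+}$, with $E_1$ compact. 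Your argument thus makes $(\sharp)$ independent of $(\flat)$, whereas the paper's is more modular, reusing the structural results on purely attractive and purely repulsive models. A minor remark: your phrase ``freely swapping the two infima in the definition of $\mathrm{F}_{\mathfrak{m}}^{\sharp}$'' is slightly misplaced---the swap of infima actually occurs on the $-\mathrm{P}_{\mathfrak{m}}^{\sharp}$ side, between $\inf_\rho$ and $\inf_{c_{a,-}}$.
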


\noindent The proof of this theorem uses neither Ginibre inequalities \cite[%
Eq. (2.10)]{Ginibre} nor the Bogoliubov (convexity) inequality \cite[%
Corollary D.4]{BruZagrebnov8} w.r.t. $U_{l}$ and $U_{l}(c_{a})$ (\ref%
{internal and surface energies approximated}). In particular, we \emph{never
use} Equality (\ref{finite volume gap eq0}). Consequently, the proof given
in this monograph is essentially different from those of \cite%
{Bogjunior,approx-hamil-method0,approx-hamil-method,approx-hamil-method2}.
Additionally, the equality $\mathrm{P}_{\mathfrak{m}}^{\flat }=-\mathrm{F}_{%
\mathfrak{m}}^{\flat }$ is a new result and we do not need additional
assumptions as in \cite%
{Bogjunior,approx-hamil-method0,approx-hamil-method,approx-hamil-method2}
when $\gamma _{a,+}\neq 0$ (a.e.), see Condition (A4) and Theorem \ref%
{Theorem AHM} in Section \ref{Section approx method}. Our proof uses,
instead, Theorem \ref{BCS main theorem 1} (i) together with a fine analysis
of the corresponding variational problems over the set $E_{1}$.

It follows from\ Theorem \ref{theorem saddle point} that $\mathrm{P}_{%
\mathfrak{m}}:=\mathrm{P}_{\mathfrak{m}}^{\sharp }=\mathrm{P}_{\mathfrak{m}%
}^{\flat }$ whenever either $\Phi _{a,-}=0$ (a.e.) or $\Phi _{a,+}=0$
(a.e.), as explained in Theorem \ref{theorem purement repulsif sympa}.
However, in the general case, one only has $\mathrm{F}_{\mathfrak{m}}^{\flat
}\leq \mathrm{F}_{\mathfrak{m}}^{\sharp }$, i.e., $\mathrm{P}_{\mathfrak{m}%
}^{\flat }\geq \mathrm{P}_{\mathfrak{m}}^{\sharp }$, see, e.g., (\ref%
{inequality extra}). In fact, generally, $\mathrm{P}_{\mathfrak{m}}^{\flat }>%
\mathrm{P}_{\mathfrak{m}}^{\sharp }$, i.e., $\mathrm{F}_{\mathfrak{m}%
}^{\flat }<\mathrm{F}_{\mathfrak{m}}^{\sharp }$. This fact is, indeed, not
surprising as a $\sup $ and a $\inf $ do not generally commute.

As an example, take $A=A^{\ast }\in \mathcal{U}_{0}$ and two ergodic states $%
\omega _{1},\omega _{2}\in \mathcal{E}_{1}$ such that $\omega _{1}(A)\neq
\omega _{2}(A)$. From Corollary \ref{corolaire Bishop phelps1}, there is $%
\Phi \in \mathcal{W}_{1}$ such that the t.i. states $\omega _{1}$ and $%
\omega _{2}$ belong to the closed face $\mathit{\Omega }_{\Phi }=\mathit{M}%
_{\Phi }$ of t.i. equilibrium states of the (local) model $\left( \Phi
,0,0\right) \in \mathcal{M}_{1}$. In other words, for any $\lambda \in
\lbrack 0,1]$, the convex sum%
\begin{equation*}
\lambda \omega _{1}+(1-\lambda )\omega _{2}
\end{equation*}%
is a minimizer of the free--energy density functional $f_{\Phi }$ defined in
Definition \ref{Remark free energy density}. Consequently, by using (\ref%
{inegality utile for example}) (see Section \ref{Section properties of delta}%
) we obtain that 
\begin{eqnarray*}
\inf\limits_{\rho \in E_{1}}f_{\Phi }\left( \rho \right)
&=&\inf\limits_{\rho \in E_{1}}\left\{ \Delta _{A}\left( \rho \right)
-\Delta _{A}\left( \rho \right) +f_{\Phi }\left( \rho \right) \right\} \\
&>&\inf\limits_{\rho \in E_{1}}\left\{ \left\vert \rho \left( A\right)
\right\vert _{2}^{2}-\Delta _{A}\left( \rho \right) +f_{\Phi }\left( \rho
\right) \right\} .
\end{eqnarray*}%
Combined with Theorem \ref{theorem saddle point} this strict inequality
gives a trivial example where $\mathrm{P}_{\mathfrak{m}}^{\flat }>\mathrm{P}%
_{\mathfrak{m}}^{\sharp }$, i.e., $\mathrm{F}_{\mathfrak{m}}^{\flat }<%
\mathrm{F}_{\mathfrak{m}}^{\sharp }$, because, for any $A=A^{\ast }\in 
\mathcal{U}_{0}$, there exists a finite range interaction $\Phi ^{A}\in 
\mathcal{W}_{1}$ satisfying $\Vert \Phi ^{A}\Vert _{\mathcal{W}_{1}}=\Vert
A\Vert $ and $e_{\Phi ^{A}}(\rho )=\rho \left( \mathfrak{e}_{\Phi
^{A}}\right) =\rho (A)$. Other less trivial examples can also be found by
directly showing that $\mathrm{F}_{\mathfrak{m}}^{\flat }<\mathrm{F}_{%
\mathfrak{m}}^{\sharp }$. Use, for instance, the strong coupling
BCS--Hubbard Hamiltonian described in \cite{BruPedra1}; See also \cite[Chap.
1, Section 2, 2$%
{{}^\circ}%
$]{approx-hamil-method0}. Therefore, in general, there is no \emph{saddle
points} (Definition \ref{definition saddle points}) in the thermodynamic
game defined in Definition \ref{definition two--person zero--sum game}.

The non--existence of saddle points in the thermodynamic game is an \emph{%
important} observation. It reflects the fact that repulsive and attractive
long--range forces $\Phi _{a,\pm },\Phi _{a,\pm }^{\prime }$ (Definition \ref%
{long range attraction-repulsion}) are not in \textquotedblleft
duality\textquotedblright\ in which concerns thermodynamics properties of a
given long--range model $\mathfrak{m}\in \mathcal{W}_{1}$. Indeed, the
long--range attractions $\Phi _{a,-},\Phi _{a,-}^{\prime }$ and repulsions $%
\Phi _{a,+},\Phi _{a,+}^{\prime }$ act on the thermodynamics of $\mathfrak{m}%
\in \mathcal{M}_{1}$ as the attractive and repulsive players, respectively.
Since the result of the thermodynamic game is the conservative value $%
\mathrm{F}_{\mathfrak{m}}^{\sharp }=-\mathrm{P}_{\mathfrak{m}}^{\sharp }$,
the attractive player minimizes the functional $\mathfrak{f}_{\mathfrak{m}%
}^{\sharp }\left( c_{a,-}\right) $, i.e., he optimizes his worse loss $%
\mathfrak{f}_{\mathfrak{m}}\left( c_{a,-},c_{a,+}\right) $ without knowing
the choice $d_{a,+}\in L_{+}^{2}(\mathcal{A},\mathbb{C})$ of the repulsive
player. By contrast, the repulsive player determines his strategy after
having full information on the choice of the attractive player. In other
words, as in general $\mathrm{F}_{\mathfrak{m}}^{\flat }<\mathrm{F}_{%
\mathfrak{m}}^{\sharp }$, there is a strong asymmetry between both players,
i.e., between the role of the two kinds of long--range interactions $\Phi
_{a,-},\Phi _{a,-}^{\prime }$ and $\Phi _{a,+},\Phi _{a,+}^{\prime }$.

The thermodynamic game of any given long--range model $\mathfrak{m}$ can be
extended \cite[Ch. 7, Section 7.2]{Aubinbis} to another two--person
zero--sum game with \emph{exchange of information} which has the advantage
to have, at least, one \emph{non--cooperative equilibrium}%
\index{Zero--sum games!non--cooperative equilibrium}, also called saddle
point%
\index{Saddle point} in this context. This can be seen as follows.

First, it is instructive to analyze the variational problems respectively
given by $\mathfrak{f}_{\mathfrak{m}}^{\flat }\left( c_{a,+}\right) $ and $%
\mathfrak{f}_{\mathfrak{m}}^{\sharp }\left( c_{a,-}\right) $ at fixed $%
c_{a,\pm }\in L_{\pm }^{2}(\mathcal{A},\mathbb{C})$. So, we introduced their
sets%
\begin{equation}
\begin{array}{l}
\mathcal{C}_{\mathfrak{m}}^{\flat }\left( c_{a,+}\right) :=\Big\{d_{a,-}\in
L_{-}^{2}(\mathcal{A},\mathbb{C}):\mathfrak{f}_{\mathfrak{m}}^{\flat }\left(
c_{a,+}\right) =\mathfrak{f}_{\mathfrak{m}}\left( d_{a,-},c_{a,+}\right) %
\Big\}, \\[1.2ex] 
\mathcal{C}_{\mathfrak{m}}^{\sharp }\left( c_{a,-}\right) :=\Big\{d_{a,+}\in
L_{+}^{2}(\mathcal{A},\mathbb{C}):\mathfrak{f}_{\mathfrak{m}}^{\sharp
}\left( c_{a,-}\right) =\mathfrak{f}_{\mathfrak{m}}\left(
c_{a,-},d_{a,+}\right) \Big\}%
\end{array}
\label{eq conserve strategybis}
\end{equation}%
of, respectively, minimizers and maximizers for any $c_{a,\pm }\in L_{\pm
}^{2}(\mathcal{A},\mathbb{C})$. We prove in Lemma \ref{lemma idiot
interaction approx 2 copy(2)} that, for all $c_{a,+}\in L_{+}^{2}(\mathcal{A}%
,\mathbb{C})$, the set $\mathcal{C}_{\mathfrak{m}}^{\flat }\left(
c_{a,+}\right) $ is non--empty, norm--bounded, and weakly compact, whereas,
for all $c_{a,-}\in L_{-}^{2}(\mathcal{A},\mathbb{C})$, the set $\mathcal{C}%
_{\mathfrak{m}}^{\sharp }\left( c_{a,-}\right) $ has exactly one element $%
\mathrm{r}_{+}(c_{a,-})$ provided that $\gamma _{a,\pm }\neq 0$ (a.e.).
Therefore, we would like to use Theorem \ref{theorem extension games} to
extend the strategy set $L_{+}^{2}(\mathcal{A},\mathbb{C})$ of the
thermodynamic game to the set $\mathrm{C}\left( L_{-}^{2},L_{+}^{2}\right) $
of continuous mappings from $L_{-}^{2}(\mathcal{A},\mathbb{C})$ to $%
L_{+}^{2}(\mathcal{A},\mathbb{C})$ with $L_{-}^{2}(\mathcal{A},\mathbb{C})$
and $L_{+}^{2}(\mathcal{A},\mathbb{C})$ equipped with the weak and norm
topologies, respectively.

In this context $\mathrm{C}\left( L_{-}^{2},L_{+}^{2}\right) $ is called the
set of \emph{continuous decision rules} of the repulsive player. If $\gamma
_{a,\pm }\neq 0$ (a.e.) then an important continuous decision rule is given
by the unique solution $\mathrm{r}_{+}(c_{a,-})$ of the variational problem $%
\mathfrak{f}_{\mathfrak{m}}^{\sharp }\left( c_{a,-}\right) $, see Lemma \ref%
{lemma idiot interaction approx 2 copy(2)} ($\sharp $). Indeed, the map $%
\mathrm{r}_{+}$ from $L_{-}^{2}(\mathcal{A},\mathbb{C})$ to $L_{+}^{2}(%
\mathcal{A},\mathbb{C})$ defined by%
\begin{equation}
\mathrm{r}_{+}:c_{a,-}\mapsto \mathrm{r}_{+}\left( c_{a,-}\right) \in 
\mathcal{C}_{\mathfrak{m}}^{\sharp }\left( c_{a,-}\right)
\label{thermodyn decision rule}
\end{equation}%
belongs to $\mathrm{C}\left( L_{-}^{2},L_{+}^{2}\right) $ because of Lemma %
\ref{lemma idiot interaction approx 2 copy(3)}. The functional $\mathrm{r}%
_{+}$ is called the \emph{thermodynamic decision rule} of the model $%
\mathfrak{m}\in \mathcal{M}_{1}$.%
\index{Thermodynamic game!decision rule|textbf}

We define now, for any long--range model $\mathfrak{m}\in \mathcal{M}_{1}$,
a map $\mathfrak{f}_{\mathfrak{m}}^{\mathrm{ext}}$ from $L_{-}^{2}(\mathcal{A%
},\mathbb{C})$ to $\mathrm{C}(L_{-}^{2},L_{+}^{2})$ by 
\begin{equation}
\mathfrak{f}_{\mathfrak{m}}^{\mathrm{ext}}(c_{a,-},%
\tilde{r}_{+}):=\mathfrak{f}_{\mathfrak{m}}(c_{a,-},\tilde{r}_{+}(c_{a,-}))
\label{extended functional}
\end{equation}%
for all $\tilde{r}_{+}\in \mathrm{C}(L_{-}^{2},L_{+}^{2})$. 
\index{Thermodynamic game!extended}This functional is called the \emph{%
loss--gain} function of the \emph{extended thermodynamic game} of the model $%
\mathfrak{m}$. In contrast to the thermodynamic game defined in Definition %
\ref{definition two--person zero--sum game}, this extended game has the main
advantage to have, at least, one \emph{non--cooperative equilibrium}:

\begin{theorem}[Non--cooperative equilibrium of the extended game]
\label{lemma extension game}\mbox{ }\newline
\index{Thermodynamic game!extended!non--cooperative equilibrium}Let $\gamma
_{a,+}\neq 0$ (a.e.). Then any $d_{a,-}\in \mathcal{C}_{\mathfrak{m}%
}^{\sharp }$ and the map $\mathrm{r}_{+}\in \mathrm{C}\left(
L_{-}^{2},L_{+}^{2}\right) $ defined by (\ref{thermodyn decision rule}) form
a saddle point%
\index{Saddle point}%
\index{Zero--sum games!non--cooperative equilibrium} of the extended
thermodynamic game defined by 
\begin{eqnarray*}
\mathrm{F}_{\mathfrak{m}}^{\sharp } &=&\underset{%
\tilde{r}_{+}\in \mathrm{C}\left( L_{-}^{2},L_{+}^{2}\right) }{\sup }\left\{ 
\underset{c_{a,-}\in L_{-}^{2}(\mathcal{A},\mathbb{C})}{\inf }\mathfrak{f}_{%
\mathfrak{m}}^{\mathrm{ext}}\left( c_{a,-},\tilde{r}_{+}\right) \right\} \\
&=&\underset{c_{a,-}\in L_{-}^{2}(\mathcal{A},\mathbb{C})}{\inf }\left\{ 
\underset{\tilde{r}_{+}\in \mathrm{C}\left( L_{-}^{2},L_{+}^{2}\right) }{%
\sup }\mathfrak{f}_{\mathfrak{m}}^{\mathrm{ext}}\left( c_{a,-},\tilde{r}%
_{+}\right) \right\} .
\end{eqnarray*}
\end{theorem}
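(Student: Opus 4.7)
The plan is to verify directly the saddle point inequality
\[
\mathfrak{f}_{\mathfrak{m}}^{\mathrm{ext}}(d_{a,-},\tilde r_{+}) \;\leq\; \mathfrak{f}_{\mathfrak{m}}^{\mathrm{ext}}(d_{a,-},\mathrm{r}_{+}) \;\leq\; \mathfrak{f}_{\mathfrak{m}}^{\mathrm{ext}}(c_{a,-},\mathrm{r}_{+})
\]
for all $c_{a,-}\in L_{-}^{2}$ and $\tilde r_{+}\in \mathrm{C}(L_{-}^{2},L_{+}^{2})$, and then to derive the two min--max equalities as a mechanical consequence. The structural ingredients that make this possible are already on the table: under the assumption $\gamma_{a,+}\neq 0$ (a.e.), Lemma \ref{lemma idiot interaction approx 2 copy(2)} ($\sharp$) tells us that for each $c_{a,-}$ there is a \emph{unique} pointwise maximizer $\mathrm{r}_{+}(c_{a,-})$ of $\mathfrak{f}_{\mathfrak{m}}(c_{a,-},\cdot)$, and Lemma \ref{lemma idiot interaction approx 2 copy(3)} ensures that the resulting selection $\mathrm{r}_{+}$ is continuous from $L_{-}^{2}$ (weak) to $L_{+}^{2}$ (norm), i.e.\ lies in the admissible set $\mathrm{C}(L_{-}^{2},L_{+}^{2})$. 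Likewise, Lemma \ref{lemma idiot interaction approx 2} guarantees that $\mathcal{C}_{\mathfrak{m}}^{\sharp}$ is nonempty, so a candidate $d_{a,-}$ exists.

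The right inequality is immediate from the definition of $\mathrm{r}_{+}$: by pointwise maximality,
\[
\mathfrak{f}_{\mathfrak{m}}^{\mathrm{ext}}(c_{a,-},\mathrm{r}_{+}) \;=\; \mathfrak{f}_{\mathfrak{m}}\bigl(c_{a,-},\mathrm{r}_{+}(c_{a,-})\bigr) \;=\; \mathfrak{f}_{\mathfrak{m}}^{\sharp}(c_{a,-}),
\]
so the inequality $\mathfrak{f}_{\mathfrak{m}}^{\mathrm{ext}}(d_{a,-},\mathrm{r}_{+}) \leq \mathfrak{f}_{\mathfrak{m}}^{\mathrm{ext}}(c_{a,-},\mathrm{r}_{+})$ reduces to $\mathfrak{f}_{\mathfrak{m}}^{\sharp}(d_{a,-}) \leq \mathfrak{f}_{\mathfrak{m}}^{\sharp}(c_{a,-})$, which is exactly what $d_{a,-}\in \mathcal{C}_{\mathfrak{m}}^{\sharp}$ asserts. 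The left inequality is even easier: for any $\tilde r_{+}\in \mathrm{C}(L_{-}^{2},L_{+}^{2})$,
\[
\mathfrak{f}_{\mathfrak{m}}^{\mathrm{ext}}(d_{a,-},\tilde r_{+}) \;=\; \mathfrak{f}_{\mathfrak{m}}\bigl(d_{a,-},\tilde r_{+}(d_{a,-})\bigr) \;\leq\; \mathfrak{f}_{\mathfrak{m}}\bigl(d_{a,-},\mathrm{r}_{+}(d_{a,-})\bigr) \;=\; \mathfrak{f}_{\mathfrak{m}}^{\mathrm{ext}}(d_{a,-},\mathrm{r}_{+}),
\]
again by pointwise maximality at the single point $d_{a,-}$.

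With the saddle point in hand, the two min--max equalities come for free. For fixed $c_{a,-}$, testing against the constant maps $\tilde r_{+}\equiv c_{a,+}$ for arbitrary $c_{a,+}\in L_{+}^{2}$ gives $\sup_{\tilde r_{+}} \mathfrak{f}_{\mathfrak{m}}^{\mathrm{ext}}(c_{a,-},\tilde r_{+}) = \mathfrak{f}_{\mathfrak{m}}^{\sharp}(c_{a,-})$; taking infimum over $c_{a,-}$ yields $\inf \sup = \mathrm{F}_{\mathfrak{m}}^{\sharp}$. The general inequality $\sup\inf \leq \inf\sup$ bounds $\sup_{\tilde r_{+}}\inf_{c_{a,-}}\mathfrak{f}_{\mathfrak{m}}^{\mathrm{ext}}$ from above by $\mathrm{F}_{\mathfrak{m}}^{\sharp}$, while the saddle point applied with $\tilde r_{+}=\mathrm{r}_{+}$ gives $\inf_{c_{a,-}}\mathfrak{f}_{\mathfrak{m}}^{\mathrm{ext}}(c_{a,-},\mathrm{r}_{+}) = \mathfrak{f}_{\mathfrak{m}}^{\mathrm{ext}}(d_{a,-},\mathrm{r}_{+}) = \mathfrak{f}_{\mathfrak{m}}^{\sharp}(d_{a,-}) = \mathrm{F}_{\mathfrak{m}}^{\sharp}$, closing the chain.

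The core of the argument is thus essentially tautological once one has $\mathrm{r}_{+}\in\mathrm{C}(L_{-}^{2},L_{+}^{2})$; the whole difficulty has been displaced to the earlier work. The only real obstacle — namely continuity of the pointwise-optimum selection from weak-$L_{-}^{2}$ to norm-$L_{+}^{2}$ — is precisely what requires the uniqueness provided by $\gamma_{a,+}\neq 0$ (a.e.) and is handled by Lemma \ref{lemma idiot interaction approx 2 copy(3)}; this is also the reason the abstract extension Theorem \ref{theorem extension games} of \cite{Aubinbis} applies in this setting and why an assumption on $\gamma_{a,-}$ is not needed here.
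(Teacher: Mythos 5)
Your proof is correct. It takes a genuinely different (and somewhat more elementary) route than the paper's. The paper first restricts both optimizations to balls $\mathcal{B}_{R}(0)$ to gain compactness, invokes Lasry's extension theorem (Theorem \ref{theorem extension games}) to obtain $\mathrm{F}_{\mathfrak{m}}^{\sharp}=\sup_{\tilde{r}_{+}}\inf_{c_{a,-}}\mathfrak{f}_{\mathfrak{m}}^{\mathrm{ext}}$, then observes that the $\inf$ and $\sup$ trivially commute because $\sup_{\tilde{r}_{+}}\mathfrak{f}_{\mathfrak{m}}^{\mathrm{ext}}(c_{a,-},\tilde{r}_{+})=\mathfrak{f}_{\mathfrak{m}}^{\sharp}(c_{a,-})$, and only at the end reads off the saddle point. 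You instead verify the two saddle--point inequalities directly from the pointwise maximality of $\mathrm{r}_{+}(c_{a,-})$ and the minimality of $d_{a,-}$, and then deduce both min--max identities as mechanical consequences. Both arguments rest on exactly the same inputs, namely Lemmata \ref{lemma idiot interaction approx 2} ($\sharp$), \ref{lemma idiot interaction approx 2 copy(2)} ($\sharp$) and \ref{lemma idiot interaction approx 2 copy(3)}, but your route dispenses with Lasry's theorem altogether: once the best--response selection $\mathrm{r}_{+}$ is known to be single--valued and to lie in $\mathrm{C}(L_{-}^{2},L_{+}^{2})$, plugging it in gives $\inf_{c_{a,-}}\mathfrak{f}_{\mathfrak{m}}^{\mathrm{ext}}(c_{a,-},\mathrm{r}_{+})=\inf_{c_{a,-}}\mathfrak{f}_{\mathfrak{m}}^{\sharp}(c_{a,-})=\mathrm{F}_{\mathfrak{m}}^{\sharp}$, which combined with the universal inequality $\sup\inf\leq\inf\sup$ is all one needs; no detour through compact balls is required. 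What Lasry's theorem buys in general --- handling the situation where no continuous selection of maximizers exists --- is not needed here precisely because $\gamma_{a,+}\neq 0$ (a.e.) forces uniqueness of the maximizer, as you correctly point out.
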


\begin{proof}%
The map $\mathrm{r}_{+}$ is well--defined because of Lemma \ref{lemma idiot
interaction approx 2 copy(2)} ($\sharp $) and, by Lemma \ref{lemma idiot
interaction approx 2 copy(3)}, it is continuous w.r.t. the weak topology in $%
L_{-}^{2}(\mathcal{A},\mathbb{C})$ and the norm topology in $L_{+}^{2}(%
\mathcal{A},\mathbb{C})$, i.e., $\mathrm{r}_{+}\in \mathrm{C}\left(
L_{-}^{2},L_{+}^{2}\right) $.

By Lemma \ref{lemma idiot interaction approx 2} ($\sharp $), the non--empty
set $\mathcal{C}_{\mathfrak{m}}^{\sharp }\subseteq L_{-}^{2}(\mathcal{A},%
\mathbb{C})$ of conservative strategies of the attractive player (cf. (\ref%
{eq conserve strategy})) is norm--bounded and weakly compact, whereas, by
Lemma \ref{lemma idiot interaction approx 2 copy(2)} ($\sharp $), the set $%
\mathcal{C}_{\mathfrak{m}}^{\sharp }(c_{a,-})$ (cf. (\ref{eq conserve
strategybis})) has exactly one element $\mathrm{r}_{+}(c_{a,-})$ at any
fixed $c_{a,-}\in L_{-}^{2}(\mathcal{A},\mathbb{C})$. As a consequence, the
infimum and supremum of $\mathrm{F}_{\mathfrak{m}}^{\sharp }<\infty $ can be
restricted to balls $\mathcal{B}_{R}\left( 0\right) $ in $L_{\pm }^{2}(%
\mathcal{A},\mathbb{C})$ of radius $R<\infty $. Therefore, by using Lemma %
\ref{lemma chiant property map de base}, we can apply Theorem \ref{theorem
extension games}%
\index{Lasry theorem} to get%
\begin{equation}
\mathrm{F}_{\mathfrak{m}}^{\sharp }=\underset{%
\tilde{r}_{+}\in \mathrm{C}\left( L_{-}^{2},L_{+}^{2}\right) }{\sup }\left\{ 
\underset{c_{a,-}\in L_{-}^{2}(\mathcal{A},\mathbb{C})}{\inf }\mathfrak{f}_{%
\mathfrak{m}}^{\mathrm{ext}}\left( c_{a,-},\tilde{r}_{+}\right) \right\} .
\label{eq lasry}
\end{equation}%
The $\inf $ and $\sup $ in the r.h.s. of the last equality trivially
commute, i.e., 
\begin{equation}
\mathrm{F}_{\mathfrak{m}}^{\sharp }=\underset{c_{a,-}\in L_{-}^{2}(\mathcal{A%
},\mathbb{C})}{\inf }\left\{ \underset{\tilde{r}_{+}\in \mathrm{C}\left(
L_{-}^{2},L_{+}^{2}\right) }{\sup }\mathfrak{f}_{\mathfrak{m}}^{\mathrm{ext}%
}\left( c_{a,-},\tilde{r}_{+}\right) \right\} ,  \label{eq lasrybis}
\end{equation}%
because%
\begin{equation*}
\underset{\tilde{r}_{+}\in \mathrm{C}\left( L_{-}^{2},L_{+}^{2}\right) }{%
\sup }\mathfrak{f}_{\mathfrak{m}}^{\mathrm{ext}}(c_{a,-},\tilde{r}_{+})=%
\mathfrak{f}_{\mathfrak{m}}^{\mathrm{ext}}\left( c_{a,-},\mathrm{r}%
_{+}\right) =\underset{c_{a,+}\in L_{+}^{2}(\mathcal{A},\mathbb{C})}{\sup }%
\mathfrak{f}_{\mathfrak{m}}\left( c_{a,-},c_{a,+}\right) .
\end{equation*}%
In particular, for any $d_{a,-}\in \mathcal{C}_{\mathfrak{m}}^{\sharp }$, 
\begin{equation*}
\mathrm{F}_{\mathfrak{m}}^{\sharp }=\mathfrak{f}_{\mathfrak{m}}^{\mathrm{ext}%
}\left( d_{a,-},\mathrm{r}_{+}\right) =\underset{c_{a,-}\in L_{-}^{2}(%
\mathcal{A},\mathbb{C})}{\inf }\mathfrak{f}_{\mathfrak{m}}^{\mathrm{ext}%
}\left( c_{a,-},\mathrm{r}_{+}\right) =\underset{\tilde{r}_{+}\in \mathrm{C}%
\left( L_{-}^{2},L_{+}^{2}\right) }{\sup }\mathfrak{f}_{\mathfrak{m}}^{%
\mathrm{ext}}(d_{a,-},\tilde{r}_{+})
\end{equation*}%
which combined with (\ref{eq lasry})--(\ref{eq lasrybis}) implies that $%
\left( d_{a,-},\mathrm{r}_{+}\right) $ is a saddle point of $\mathfrak{f}_{%
\mathfrak{m}}^{\mathrm{ext}}$. 
\end{proof}%

\begin{remark}[Thermodynamics as a three--person zero--sum game]
\mbox{ }\newline
\index{Zero--sum games!three--person}Since the pressure $P_{\mathfrak{m}%
}\left( c_{a}\right) $ in Definition \ref{definition approximating
free--energy} of the approximating free--energy density $\mathfrak{f}_{%
\mathfrak{m}}$ equals the variational problem (\ref{variational problem
approx}) over t.i. states, we could also see the equality $\mathrm{F}_{%
\mathfrak{m}}^{\sharp }=-\mathrm{P}_{\mathfrak{m}}^{\sharp }$ of Theorem \ref%
{theorem saddle point} as the result of a three--person zero--sum game. By (%
\ref{gap eq 1}) and (\ref{gap eq 2}), note that the infimum over t.i. states
and the supremum over $L_{+}^{2}(\mathcal{A},\mathbb{C})$ commute with each
other, see the proof of Lemma \ref{lemma super} for more details.
\end{remark}

\section{Gap equations and effective theories\label{Section effective
theories}}

The structure of the set $\mathit{\Omega }_{\mathfrak{m}}^{\sharp }$ of
generalized t.i. 
\index{States!generalized equilibrium}equilibrium states (Definition \ref%
{definition equilibirum state}) w.r.t. the thermodynamic game can be now
discussed in details. It is based on Section \ref{equilibirum.paragraph}
which gives a rigorous justification, on the level of generalized t.i.
equilibrium states, of the heuristics discussed in the beginning of Section %
\ref{Section thermo game}. In particular, we prove that Equality (\ref%
{finite volume gap eq}) must be satisfied in the thermodynamic limit for any
extreme point of $\mathit{\Omega }_{\mathfrak{m}}^{\sharp }$.

More precisely, for all functions $c_{a}\in L^{2}(\mathcal{A},\mathbb{C})$,
we define the (possibly empty) set 
\begin{equation}
\mathit{\Omega }_{\mathfrak{m}}^{\sharp }\left( c_{a}\right) :=\left\{
\omega \in \mathit{M}_{\Phi (c_{a})}:e_{\Phi _{a}}(\omega )+ie_{\Phi
_{a}^{\prime }}(\omega )=c_{a}\mathrm{\ (a.e.)}\right\}
\label{subset of a face}
\end{equation}%
with $\mathit{M}_{\Phi \left( c_{a}\right) }$ being the closed face
described in Lemma \ref{remark equilibrium state approches}, see also (\ref%
{equivalence def equilibrium states}). Then we obtain \emph{Euler--Lagrange
equations}%
\index{Euler--Lagrange equations} for the approximating interactions (cf.
Remark \ref{remark Euler--Lagrange equations}) -- also called \emph{gap
equations}%
\index{Gap equations} in the Physics literature (cf. Remark \ref{remark gap
eq}) -- which say that any extreme point of $\mathit{\Omega }_{\mathfrak{m}%
}^{\sharp }$ must belong to a set 
\begin{equation}
\mathit{\Omega }_{\mathfrak{m}}^{\sharp }\left( d_{a,-}+\mathrm{r}%
_{+}(d_{a,-})\right)  \label{set gap}
\end{equation}%
with $d_{a,-}\in \mathcal{C}_{\mathfrak{m}}^{\sharp }$, $\mathrm{r}_{+}\in 
\mathrm{C}\left( L_{-}^{2},L_{+}^{2}\right) $ defined by (\ref{thermodyn
decision rule}), and where $\mathcal{C}_{\mathfrak{m}}^{\sharp }$ is the
non--empty, norm--bounded, and weakly compact set defined by (\ref{eq
conserve strategy}), see Lemma \ref{lemma idiot interaction approx 2} ($%
\sharp $). Indeed, we obtain the following statements:

\begin{theorem}[Gap equations for $\mathfrak{m}\in \mathcal{M}_{1}$-- I]
\label{theorem structure of omega}\mbox{ }\newline
\emph{(i)} 
\index{States!generalized equilibrium!gap equations}The set $\mathit{%
\hat{M}}_{\mathfrak{m}}$ (\ref{definition minimizers of reduced free energy}%
) of minimizers of the functional $g_{\mathfrak{m}}$ over $E_{1}$ equals 
\begin{equation*}
\mathit{\hat{M}}_{\mathfrak{m}}=\underset{d_{a,-}\in \mathcal{C}_{\mathfrak{m%
}}^{\sharp }}{\cup }\mathit{\Omega }_{\mathfrak{m}}^{\sharp }\left( d_{a,-}+%
\mathrm{r}_{+}(d_{a,-})\right) .
\end{equation*}%
\emph{(ii)} The set $\mathcal{E}(\mathit{\Omega }_{\mathfrak{m}}^{\sharp })$
of extreme points of $\mathit{\Omega }_{\mathfrak{m}}^{\sharp }$ is included
in the union for all $d_{a,-}\in \mathcal{C}_{\mathfrak{m}}^{\sharp }$ of
the sets of all extreme points of the non--empty, disjoint, convex and weak$%
^{\ast }$--compact sets (\ref{set gap}), i.e., 
\begin{equation*}
\mathcal{E}(\mathit{\Omega }_{\mathfrak{m}}^{\sharp })\subseteq \underset{%
d_{a,-}\in \mathcal{C}_{\mathfrak{m}}^{\sharp }}{\cup }\mathcal{E}\left( 
\mathit{\Omega }_{\mathfrak{m}}^{\sharp }\left( d_{a,-}+\mathrm{r}%
_{+}(d_{a,-})\right) \right)
\end{equation*}
\end{theorem}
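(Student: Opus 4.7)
The plan is to establish \emph{(i)} via two separate inclusions obtained through a direct computation of $g_{\mathfrak{m}}(\omega)$ in terms of the approximating functional $\mathfrak{f}_{\mathfrak{m}}$, and then to derive \emph{(ii)} from \emph{(i)} combined with Theorem \ref{theorem structure of omega copy(1)} (ii). The arithmetic identity I will exploit repeatedly is that, for $c_{a,\pm}\in L_{\pm}^{2}(\mathcal{A},\mathbb{C})$,
\begin{equation*}
2\mathrm{Re}\langle c_{a},\gamma_{a}c_{a}\rangle=2(\|c_{a,+}\|_{2}^{2}-\|c_{a,-}\|_{2}^{2}),
\end{equation*}
which is what couples $g_{\mathfrak{m}}$ (Definition \ref{Reduced free energy}) to the energy density of the approximating interaction $\Phi(c_{a})$ of Definition \ref{definition BCS-type model approximated}.

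For the inclusion $\supseteq$ in \emph{(i)}, I take $\omega\in\mathit{\Omega}_{\mathfrak{m}}^{\sharp}(c_{a})$ with $c_{a}=d_{a,-}+\mathrm{r}_{+}(d_{a,-})$ and $d_{a,-}\in\mathcal{C}_{\mathfrak{m}}^{\sharp}$. Since $\omega\in\mathit{M}_{\Phi(c_{a})}$, one has $f_{\Phi(c_{a})}(\omega)=-P_{\mathfrak{m}}(c_{a})$, and the defining constraint $e_{\Phi_{a}}(\omega)+ie_{\Phi_{a}^{\prime}}(\omega)=c_{a}$ combined with the identity above isolates $e_{\Phi}(\omega)-\beta^{-1}s(\omega)$; substituting back into Definition \ref{Reduced free energy} yields $g_{\mathfrak{m}}(\omega)=\mathfrak{f}_{\mathfrak{m}}(d_{a,-},\mathrm{r}_{+}(d_{a,-}))=\mathfrak{f}_{\mathfrak{m}}^{\sharp}(d_{a,-})=\mathrm{F}_{\mathfrak{m}}^{\sharp}$, and Theorem \ref{theorem saddle point} ($\sharp$) together with Theorem \ref{BCS main theorem 1} (i) identifies this value with $\inf g_{\mathfrak{m}}(E_{1})$, whence $\omega\in\mathit{\hat{M}}_{\mathfrak{m}}$. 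For the converse inclusion, I take $\omega\in\mathit{\hat{M}}_{\mathfrak{m}}$, set $y_{a}:=e_{\Phi_{a}}(\omega)+ie_{\Phi_{a}^{\prime}}(\omega)$, $y_{a,\pm}:=\gamma_{a,\pm}y_{a}$, and use $\omega$ as a test state in the variational inequality $-P_{\mathfrak{m}}(c_{a})\leq f_{\Phi(c_{a})}(\omega)$ for arbitrary $c_{a}\in L^{2}(\mathcal{A},\mathbb{C})$. Substituting the expression for $e_{\Phi}(\omega)-\beta^{-1}s(\omega)$ read off from $g_{\mathfrak{m}}(\omega)=-\mathrm{P}_{\mathfrak{m}}^{\sharp}$ and completing squares produces the master inequality
\begin{equation*}
\mathfrak{f}_{\mathfrak{m}}(c_{a,-},c_{a,+})-\mathrm{F}_{\mathfrak{m}}^{\sharp}\leq\|c_{a,-}-y_{a,-}\|_{2}^{2}-\|c_{a,+}-y_{a,+}\|_{2}^{2}.
\end{equation*}
Taking the supremum over $c_{a,+}\in L_{+}^{2}$ at $c_{a,-}=y_{a,-}$ yields $\mathfrak{f}_{\mathfrak{m}}^{\sharp}(y_{a,-})\leq\mathrm{F}_{\mathfrak{m}}^{\sharp}$, hence $y_{a,-}\in\mathcal{C}_{\mathfrak{m}}^{\sharp}$; uniqueness of the maximizer of $\mathfrak{f}_{\mathfrak{m}}(y_{a,-},\cdot)$ (Lemma \ref{lemma idiot interaction approx 2 copy(2)} ($\sharp$)) applied to the master inequality at $c_{a,+}=\mathrm{r}_{+}(y_{a,-})$ forces $\mathrm{r}_{+}(y_{a,-})=y_{a,+}$, and tracing back the equality case gives $\omega\in\mathit{M}_{\Phi(y_{a})}$, i.e., $\omega\in\mathit{\Omega}_{\mathfrak{m}}^{\sharp}(y_{a,-}+\mathrm{r}_{+}(y_{a,-}))$.

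For \emph{(ii)}, Theorem \ref{theorem structure of omega copy(1)} (ii) gives $\mathcal{E}(\mathit{\Omega}_{\mathfrak{m}}^{\sharp})\subseteq\mathit{\hat{M}}_{\mathfrak{m}}$; combined with \emph{(i)}, every extreme point lies in some $\mathit{\Omega}_{\mathfrak{m}}^{\sharp}(d_{a,-}+\mathrm{r}_{+}(d_{a,-}))$, and extremality transfers to the smaller subset since it is contained in $\mathit{\Omega}_{\mathfrak{m}}^{\sharp}$. Convexity and weak$^{\ast}$--compactness of each $\mathit{\Omega}_{\mathfrak{m}}^{\sharp}(d_{a,-}+\mathrm{r}_{+}(d_{a,-}))$ follow from Lemma \ref{remark equilibrium state approches} and the affinity and weak$^{\ast}$--continuity of $\omega\mapsto e_{\Phi_{a}}(\omega)+ie_{\Phi_{a}^{\prime}}(\omega)$ (Lemma \ref{Th.en.func}), while disjointness is automatic because this map reads off the value of $c_{a}$. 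The main obstacle is the non--emptiness of $\mathit{\Omega}_{\mathfrak{m}}^{\sharp}(d_{a,-}+\mathrm{r}_{+}(d_{a,-}))$ for \emph{every} $d_{a,-}\in\mathcal{C}_{\mathfrak{m}}^{\sharp}$: this is the genuine gap--equation content of the theorem and requires a first--order (subdifferential) analysis, identifying $\partial P_{\mathfrak{m}}$ at $d_{a}^{\ast}:=d_{a,-}+\mathrm{r}_{+}(d_{a,-})$ with the convex weak$^{\ast}$--compact set $2\gamma_{a}\{e_{\Phi_{a}}(\omega)+ie_{\Phi_{a}^{\prime}}(\omega):\omega\in\mathit{M}_{\Phi(d_{a}^{\ast})}\}$ and then using the stationarity conditions that characterize both $\mathcal{C}_{\mathfrak{m}}^{\sharp}$ and $\mathrm{r}_{+}(d_{a,-})$ to extract an equilibrium state of $\Phi(d_{a}^{\ast})$ whose moments match $d_{a}^{\ast}$ exactly.
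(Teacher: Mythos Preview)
Your argument for the set equality in (i) is correct and is a genuinely different route from the paper's. The paper proceeds in two stages through the intermediate purely repulsive models $\mathfrak{m}(d_{a,-})$ defined in (\ref{model purement repulsif}): first, Lemma~\ref{lemma explosion l du mec copacabana1 copy(2)} uses the double--infimum identity (\ref{equality sympa generalbis})--(\ref{equality sympa generalbisbis}) and Lemma~\ref{eq idiot sympa} to show $\mathit{\hat M}_{\mathfrak m}=\bigcup_{d_{a,-}\in\mathcal{C}_{\mathfrak m}^{\sharp}}\mathit{\Omega}^{\sharp}_{\mathfrak m(d_{a,-})}$; second, Lemma~\ref{lemma explosion l du mec copacabana cas repulsif} (whose key input is the von Neumann saddle point produced in Lemma~\ref{lemma super}) identifies each $\mathit{\Omega}^{\sharp}_{\mathfrak m(d_{a,-})}$ with $\mathit{\Omega}^{\sharp}_{\mathfrak m}(d_{a,-}+\mathrm r_{+}(d_{a,-}))$. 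Your master inequality
\[
\mathfrak{f}_{\mathfrak m}(c_{a,-},c_{a,+})-\mathrm F_{\mathfrak m}^{\sharp}\le \|c_{a,-}-y_{a,-}\|_{2}^{2}-\|c_{a,+}-y_{a,+}\|_{2}^{2}
\]
bypasses the intermediate models entirely and gives both the gap equations and membership in $\mathit{M}_{\Phi(y_a)}$ in one stroke; this is cleaner for the inclusion $\mathit{\hat M}_{\mathfrak m}\subseteq\bigcup(\cdots)$. Your derivation of (ii) from (i) and Theorem~\ref{theorem structure of omega copy(1)} (ii) is exactly the paper's.

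The one place where your sketch is incomplete is the \emph{non--emptiness} of $\mathit{\Omega}^{\sharp}_{\mathfrak m}(d_{a,-}+\mathrm r_{+}(d_{a,-}))$ for \emph{every} $d_{a,-}\in\mathcal C_{\mathfrak m}^{\sharp}$, which is asserted in the statement of (ii). Your subdifferential outline is plausible, but the delicate point is that the first--order condition for the $c_{a,+}$--maximization and the first--order condition for the $c_{a,-}$--minimization of $\mathfrak f^{\sharp}_{\mathfrak m}$ may a priori be witnessed by \emph{different} elements of $\partial P_{\mathfrak m}(d_a^{\ast})$, i.e., by different equilibrium states of $\Phi(d_a^{\ast})$; one needs a single $\omega\in\mathit{M}_{\Phi(d_a^{\ast})}$ solving both gap equations simultaneously. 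The paper resolves this through Corollary~\ref{corollary explosion l du mec copacabana final copy(1)}: each $\mathit{\Omega}^{\sharp}_{\mathfrak m}(d_{a,-}+\mathrm r_{+}(d_{a,-}))$ coincides with $\mathit{\Omega}^{\sharp}_{\mathfrak m(d_{a,-})}$, which is non--empty by Lemma~\ref{lemma minimum sympa copy(1)}. That identification ultimately rests on the von~Neumann min--max saddle point of Lemma~\ref{lemma super}, which is precisely the device that produces one state satisfying both equations. If you want to keep your direct approach self--contained, you should either invoke that saddle--point argument at this one spot or make the subdifferential argument rigorous by showing that the set $\{e_{\Phi_a}(\omega)+ie_{\Phi'_a}(\omega):\omega\in\mathit{M}_{\Phi(d_a^{\ast})}\}$ is convex and weak$^{\ast}$--closed (it is, since $\mathit{M}_{\Phi(d_a^{\ast})}$ is a closed face) and then extracting a common element from the two optimality conditions.
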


\begin{proof}%
\ The first assertion (i) corresponds to Theorem \ref{theorem explosion l du
mec copacabana final}, see Section \ref{equilibirum.paragraph}. By Corollary %
\ref{corollary explosion l du mec copacabana final copy(1)}, we also observe
that 
\begin{equation*}
\left\{ \mathit{\Omega }_{\mathfrak{m}}^{\sharp }\left( d_{a,-}+\mathrm{r}%
_{+}(d_{a,-})\right) \right\} _{d_{a,-}\in \mathcal{C}_{\mathfrak{m}%
}^{\sharp }}
\end{equation*}%
is a family of disjoint subsets of $E_{1}$ which are all non--empty, convex,
and weak$^{\ast }$--compact. Using (i) and Theorem \ref{theorem structure of
omega copy(1)} (ii) we arrive at the second assertion (ii) with the set 
\begin{equation*}
\mathcal{E}\left( \mathit{\Omega }_{\mathfrak{m}}^{\sharp }\left( d_{a,-}+%
\mathrm{r}_{+}(d_{a,-})\right) \right) \neq \emptyset
\end{equation*}%
of all extreme points of (\ref{set gap}) being non--empty for any $%
d_{a,-}\in \mathcal{C}_{\mathfrak{m}}^{\sharp }$\ because of Theorem \ref%
{theorem Krein--Millman} (i). 
\end{proof}%

\begin{remark}[The set $\mathit{\hat{M}}_{\mathfrak{m}}$ for purely
repulsive/attractive models]
\label{theorem structure of omega-remark}\mbox{ }\newline
If $\Phi _{a,-}=0$ (a.e.) and $\Phi _{a,+}\neq 0$ (a.e.) then Theorem \ref%
{theorem structure of omega} reads as follows: If $\Phi _{a,-}=0$ (a.e.)
then $\mathit{\hat{M}}_{\mathfrak{m}}=\mathit{\Omega }_{\mathfrak{m}%
}^{\sharp }\left( d_{a,+}\right) $ with $d_{a,+}\in \mathcal{C}_{\mathfrak{m}%
}^{\flat }$ defined by (\ref{eq conserve strategy}), see Lemma \ref{lemma
explosion l du mec copacabana cas repulsif}. In particular, $\mathcal{E}(%
\mathit{\Omega }_{\mathfrak{m}}^{\sharp })=\mathcal{E}(\mathit{\Omega }_{%
\mathfrak{m}}^{\sharp }\left( d_{a,+}\right) )$. If$\ \Phi _{a,+}=0$ (a.e.)
then $\mathit{\Omega }_{\mathfrak{m}}^{\sharp }=\mathit{M}_{\mathfrak{m}}=%
\overline{\mathrm{co}(\mathit{\hat{M}}_{\mathfrak{m}})}$ is a closed face.
In particular,%
\begin{equation*}
\mathcal{E}(\mathit{\Omega }_{\mathfrak{m}}^{\sharp })=\underset{d_{a,-}\in 
\mathcal{C}_{\mathfrak{m}}^{\sharp }}{\cup }\mathcal{E}\left( \mathit{\Omega 
}_{\mathfrak{m}}^{\sharp }\left( d_{a,-}+\mathrm{r}_{+}(d_{a,-})\right)
\right) .
\end{equation*}%
Theorem \ref{theorem structure of omega} is less useful in this last
situation.
\end{remark}

Theorem \ref{theorem structure of omega} (ii) implies that, for any $\hat{%
\omega}\in \mathcal{E}(\mathit{\Omega }_{\mathfrak{m}}^{\sharp })$, there is 
$d_{a,-}\in \mathcal{C}_{\mathfrak{m}}^{\sharp }$ satisfying the
Euler--Lagrange equations (cf. Remark \ref{remark Euler--Lagrange equations}%
) -- or gap equations in Physics (cf. Remark \ref{remark gap eq}) -- 
\begin{equation}
d_{a}:=d_{a,-}+\mathrm{r}_{+}(d_{a,-})=e_{\Phi _{a}}(\hat{\omega})+ie_{\Phi
_{a}^{\prime }}(\hat{\omega})\mathrm{\ (a.e.)}.  \label{gap equations}
\end{equation}%
Conversely, for any $d_{a,-}\in \mathcal{C}_{\mathfrak{m}}^{\sharp }$, there
is some $\omega \in \mathit{\hat{M}}_{\mathfrak{m}}$ satisfying the
Euler--Lagrange equations but $\omega $ is \emph{not necessarily} an extreme
point of $\mathit{\Omega }_{\mathfrak{m}}^{\sharp }$. Observe, however, that
if $\omega \notin \mathcal{E}(\mathit{\Omega }_{\mathfrak{m}}^{\sharp })$
then we have a strong constraint on the set $\mathcal{C}_{\mathfrak{m}%
}^{\sharp }$:

\begin{theorem}[Gap equations for $\mathfrak{m}\in \mathcal{M}_{1}$-- II]
\label{theorem structure of omega copy(2)}\mbox{ }\newline
\index{States!generalized equilibrium!gap equations}For any $d_{a,-}\in 
\mathcal{C}_{\mathfrak{m}}^{\sharp }$ such that there exists $\omega \in 
\mathcal{E}(\mathit{\Omega }_{\mathfrak{m}}^{\sharp }(d_{a,-}+\mathrm{r}%
_{+}(d_{a,-})))$ satisfying $\omega \notin \mathcal{E}(\mathit{\Omega }_{%
\mathfrak{m}}^{\sharp })$, there is a probability measure $\mathrm{\nu }%
_{d_{a,-}}$ on $\mathcal{C}_{\mathfrak{m}}^{\sharp }$ not concentrated on $%
d_{a,-}$ such that (a.e.) 
\begin{equation*}
d_{a,-}=\int_{\mathcal{C}_{\mathfrak{m}}^{\sharp }}%
\hat{d}_{a,-}\mathrm{\ d\nu }_{d_{a,-}}(\hat{d}_{a,-})\quad \text{and}%
\mathrm{\quad r}_{+}(d_{a,-})=\int_{\mathcal{C}_{\mathfrak{m}}^{\sharp }}%
\mathrm{r}_{+}(\hat{d}_{a,-})\mathrm{\ d\nu }_{d_{a,-}}(\hat{d}_{a,-}).
\end{equation*}
\end{theorem}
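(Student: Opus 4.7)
My plan is to obtain $\nu_{d_{a,-}}$ as the push--forward of the Choquet representing measure of $\omega$ under a natural continuous affine map to $L_{-}^{2}(\mathcal{A},\mathbb{C})$ that takes values in $\mathcal{C}_{\mathfrak{m}}^{\sharp}$ on the extreme boundary. Since $\omega \in \mathit{\Omega}_{\mathfrak{m}}^{\sharp}$, Theorem \ref{theorem structure of omega copy(1)} (iii) furnishes a Borel probability measure $\mu_{\omega}$ on $\mathit{\Omega}_{\mathfrak{m}}^{\sharp}$ with $\mu_{\omega}(\mathcal{E}(\mathit{\Omega}_{\mathfrak{m}}^{\sharp}))=1$ and barycenter $\omega$; the hypothesis $\omega \notin \mathcal{E}(\mathit{\Omega}_{\mathfrak{m}}^{\sharp})$ therefore gives $\mu_{\omega}(\{\omega\})=0$. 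I next define
\begin{equation*}
\Psi:\mathit{\Omega}_{\mathfrak{m}}^{\sharp}\longrightarrow L_{-}^{2}(\mathcal{A},\mathbb{C}),\qquad \Psi(\varrho):=\gamma_{a,-}\left(e_{\Phi_{a}}(\varrho)+ie_{\Phi_{a}^{\prime}}(\varrho)\right).
\end{equation*}
The weak$^{\ast}$--continuity of each $e_{\Phi_{a}}$ (Lemma \ref{Th.en.func} (i)), the uniform bound $|e_{\Phi_{a}}(\varrho)|+|e_{\Phi_{a}^{\prime}}(\varrho)|\le\|\Phi_{a}\|_{\mathcal{W}_{1}}+\|\Phi_{a}^{\prime}\|_{\mathcal{W}_{1}}$, and Lebesgue's dominated convergence theorem make $\Psi$ affine and weak$^{\ast}$--to--$L^{2}$--norm continuous, hence Borel measurable. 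For each $\hat{\omega}\in\mathcal{E}(\mathit{\Omega}_{\mathfrak{m}}^{\sharp})$, Theorem \ref{theorem structure of omega} (ii) together with the disjointness of the family $\{\mathit{\Omega}_{\mathfrak{m}}^{\sharp}(d'_{a,-}+\mathrm{r}_{+}(d'_{a,-}))\}_{d'_{a,-}\in\mathcal{C}_{\mathfrak{m}}^{\sharp}}$ (as used in its proof) associates a unique $\hat{d}_{a,-}\in\mathcal{C}_{\mathfrak{m}}^{\sharp}$ with $\hat{\omega}\in\mathit{\Omega}_{\mathfrak{m}}^{\sharp}(\hat{d}_{a,-}+\mathrm{r}_{+}(\hat{d}_{a,-}))$, and the defining condition of this set forces $\Psi(\hat{\omega})=\hat{d}_{a,-}$. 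Hence $\Psi$ sends $\mathcal{E}(\mathit{\Omega}_{\mathfrak{m}}^{\sharp})$ into $\mathcal{C}_{\mathfrak{m}}^{\sharp}$ and I take $\nu_{d_{a,-}}:=\Psi_{\ast}\mu_{\omega}$, a Borel probability measure on $\mathcal{C}_{\mathfrak{m}}^{\sharp}$.

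The two integral identities will follow from the change of variables formula combined with the barycentric property of $\mu_{\omega}$. Applying the latter to the weak$^{\ast}$--continuous affine functionals $e_{\Phi_{a}}$ and $e_{\Phi_{a}^{\prime}}$ for each fixed $a\in\mathcal{A}$, and using the uniform $\mathcal{W}_{1}$--bound above with Fubini to promote the a.e. pointwise identities to identities of Bochner integrals in $L^{2}(\mathcal{A},\mathbb{C})$, I obtain
\begin{equation*}
\int_{\mathcal{C}_{\mathfrak{m}}^{\sharp}}\hat{d}_{a,-}\,\mathrm{d}\nu_{d_{a,-}}(\hat{d}_{a,-})=\int\Psi(\hat{\omega})\,\mathrm{d}\mu_{\omega}(\hat{\omega})=\gamma_{a,-}\left(e_{\Phi_{a}}(\omega)+ie_{\Phi_{a}^{\prime}}(\omega)\right)=d_{a,-},
\end{equation*}
the last equality using $\omega\in\mathit{\Omega}_{\mathfrak{m}}^{\sharp}(d_{a,-}+\mathrm{r}_{+}(d_{a,-}))$. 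The same scheme with $\gamma_{a,+}$ in place of $\gamma_{a,-}$ yields $\int\mathrm{r}_{+}(\hat{d}_{a,-})\,\mathrm{d}\nu_{d_{a,-}}=\mathrm{r}_{+}(d_{a,-})$, because the characterization of $\mathit{\Omega}_{\mathfrak{m}}^{\sharp}(\hat{d}_{a,-}+\mathrm{r}_{+}(\hat{d}_{a,-}))$ forces $\gamma_{a,+}(e_{\Phi_{a}}(\hat{\omega})+ie_{\Phi_{a}^{\prime}}(\hat{\omega}))=\mathrm{r}_{+}(\hat{d}_{a,-})$ pointwise on the support of $\mu_{\omega}$.

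The step I expect to require the most care is the non--concentration property $\nu_{d_{a,-}}\neq\delta_{d_{a,-}}$. Arguing by contradiction, assume $\nu_{d_{a,-}}=\delta_{d_{a,-}}$. Then $\mu_{\omega}$ is supported on $\Psi^{-1}(\{d_{a,-}\})\cap\mathcal{E}(\mathit{\Omega}_{\mathfrak{m}}^{\sharp})$, which by the uniqueness of $\hat{d}_{a,-}$ identified above coincides with $F\cap\mathcal{E}(\mathit{\Omega}_{\mathfrak{m}}^{\sharp})$, where $F:=\mathit{\Omega}_{\mathfrak{m}}^{\sharp}(d_{a,-}+\mathrm{r}_{+}(d_{a,-}))$. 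The set $F$ is weak$^{\ast}$--closed and convex, as the intersection of the closed face $\mathit{M}_{\Phi(d_{a,-}+\mathrm{r}_{+}(d_{a,-}))}$ (Lemma \ref{remark equilibrium state approches}) with the weak$^{\ast}$--closed affine constraint $\{e_{\Phi_{a}}+ie_{\Phi_{a}^{\prime}}=d_{a,-}+\mathrm{r}_{+}(d_{a,-})\}$, so $\mu_{\omega}$ becomes a probability measure on $F$ with barycenter $\omega$. I then exploit the extremality of $\omega$ in $F$ to force $\mu_{\omega}=\delta_{\omega}$: if some $\hat{\omega}\in\mathrm{supp}(\mu_{\omega})$ differs from $\omega$, pick a weak$^{\ast}$--continuous affine functional $\ell$ separating them, set $W:=\{\ell>c\}$ for $c$ strictly between $\ell(\omega)$ and $\ell(\hat{\omega})$, split $\mu_{\omega}=\alpha(\mu_{\omega}|_{W}/\alpha)+(1-\alpha)(\mu_{\omega}|_{W^{c}}/(1-\alpha))$ with $\alpha=\mu_{\omega}(W)$, observe that both normalized pieces have barycenters $b_{1},b_{2}\in F$ with $b_{1}\neq\omega$ (by the choice of $\ell$), and obtain the non--trivial convex decomposition $\omega=\alpha b_{1}+(1-\alpha)b_{2}$ in $F$, contradicting extremality. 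The conclusion $\mu_{\omega}=\delta_{\omega}$ however is incompatible with $\mu_{\omega}(\{\omega\})=0$, completing the proof.
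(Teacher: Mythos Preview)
Your proof is correct and follows essentially the same approach as the paper: both start from the Choquet representing measure $v_{\omega}$ (your $\mu_{\omega}$) furnished by Theorem~\ref{theorem structure of omega copy(1)}~(iii), apply the barycentric property to the affine weak$^{\ast}$--continuous energy functionals $e_{\Phi_{a}}+ie_{\Phi_{a}^{\prime}}$ via Lemma~\ref{Corollary 4.1.18.}, and combine with the gap equations~(\ref{gap equations}) from Theorem~\ref{theorem structure of omega}~(ii) to obtain the integral identities. The paper compresses into one sentence the fact that $v_{\omega}$ cannot be concentrated on $\mathit{\Omega}_{\mathfrak{m}}^{\sharp}(d_{a,-}+\mathrm{r}_{+}(d_{a,-}))$, which you have spelled out in full via the ``extremality forces Dirac'' argument; your explicit definition of $\nu_{d_{a,-}}$ as the push--forward $\Psi_{\ast}\mu_{\omega}$ likewise makes precise what the paper leaves implicit.
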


\begin{proof}%
If 
\begin{equation*}
\omega \in \mathcal{E}\left( \mathit{\Omega }_{\mathfrak{m}}^{\sharp }\left(
d_{a,-}+\mathrm{r}_{+}(d_{a,-})\right) \right) \subseteq \mathit{\hat{M}}_{%
\mathfrak{m}}\subseteq \mathit{\Omega }_{\mathfrak{m}}^{\sharp }
\end{equation*}%
and $\omega \notin \mathcal{E}(\mathit{\Omega }_{\mathfrak{m}}^{\sharp })$
then, by Theorem \ref{theorem structure of omega copy(1)} (iii), there is a
probability measure $v_{\omega }$ on $\mathit{\Omega }_{\mathfrak{m}%
}^{\sharp }$ not concentrated on the convex weak$^{\ast }$--compact set $%
\mathit{\Omega }_{\mathfrak{m}}^{\sharp }\left( d_{a,-}+\mathrm{r}%
_{+}(d_{a,-})\right) $ such that%
\begin{equation}
v_{\omega }(\mathcal{E}(\mathit{\Omega }_{\mathfrak{m}}^{\sharp }))=1\mathrm{%
\quad }\text{and}\mathrm{\quad }\omega =\int_{\mathcal{E}(\mathit{\Omega }_{%
\mathfrak{m}}^{\sharp })}\mathrm{d}v_{\omega }(\hat{\omega})\;\hat{\omega}.
\label{this}
\end{equation}%
Recall that $e_{\Phi }$ is affine and weak$^{\ast }$--continuous (Lemma \ref%
{Th.en.func} (i)) and applying (\ref{this}) on the energy observable $%
\mathfrak{e}_{\Phi _{a}}+i\mathfrak{e}_{\Phi _{a}^{\prime }}$(cf. (\ref%
{eq:enpersite})) we obtain that%
\begin{equation}
d_{a,-}+\mathrm{r}_{+}(d_{a,-})=\int_{\mathcal{E}(\mathit{\Omega }_{%
\mathfrak{m}}^{\sharp })}\mathrm{d}v_{\omega }(\hat{\omega})\;\gamma
_{a}(e_{\Phi _{a}}(\hat{\omega})+ie_{\Phi _{a}^{\prime }}(\hat{\omega}))%
\mathrm{\ (a.e.)}  \label{gap equation bis}
\end{equation}%
because of Lemma \ref{Corollary 4.1.18.}. Hence, the theorem results from (%
\ref{gap equations}) and (\ref{gap equation bis}). 
\end{proof}%

Because of this last theorem we expect the equality%
\begin{equation}
\mathcal{E}(\mathit{\Omega }_{\mathfrak{m}}^{\sharp })=\underset{d_{a,-}\in 
\mathcal{C}_{\mathfrak{m}}^{\sharp }}{\cup }\mathcal{E}\left( \mathit{\Omega 
}_{\mathfrak{m}}^{\sharp }\left( d_{a,-}+\mathrm{r}_{+}(d_{a,-})\right)
\right)  \label{chouette egality}
\end{equation}%
to hold not only for purely repulsive or purely attractive models (see
Remark \ \ref{theorem structure of omega-remark}), but in a much larger
class of long--range models. In fact, for most relevant models coming from
Physics, like BCS--type models, Equality (\ref{chouette egality}) clearly
holds.

\begin{remark}[Euler--Lagrange equations]
\label{remark Euler--Lagrange equations}\mbox{ }\newline
\index{Euler--Lagrange equations|textbf}Equations (\ref{gap equations}) are
the Euler--Lagrange equations of the min--max variational problem $\mathrm{F}%
_{\mathfrak{m}}^{\sharp }$ defined in\ Definition \ref{definition
two--person zero--sum game}. We observe, however, that the pressure $P_{%
\mathfrak{m}}\left( c_{a,-}+c_{a,+}\right) $ in Definition \ref{definition
approximating free--energy} is generally not G\^{a}teau differentiable
w.r.t. either $c_{a,-}$ or $c_{a,+}$ as the variational problem (\ref%
{variational problem approx}) can have several t.i. equilibrium states (cf.
Lemma \ref{corolaire Bishop phelps1}). In fact, Theorem \ref{Mazur}%
\index{Mazur theorem} and Remark \ref{Mazur remark} only ensure the G\^{a}%
teau differentiability of the convex and continuous map $c_{a}\mapsto P_{%
\mathfrak{m}}\left( c_{a}\right) $ from $L^{2}(\mathcal{A},\mathbb{C})$ to $%
\mathbb{R}$ on a dense subset.
\end{remark}

\begin{remark}[Gap equations in Physics]
\label{remark gap eq}\mbox{ }\newline
\index{Gap equations|textbf}Equations (\ref{gap equations}) are also called
gap equations by analogy with the Bardeen--Cooper--Schrieffer (BCS) theory
for conventional superconductors \cite{BCS1,BCS2,BCS3}. Indeed, within this
theory, the existence of a non--zero solution $d_{a,-}\in \mathcal{C}_{%
\mathfrak{m}}^{\sharp }$ implies a superconducting state as well as a gap in
the spectrum of the effective (approximating) BCS Hamiltonian. The equations
satisfied by $d_{a,-}$ are called gap equations in the Physics literature
because of this property.
\end{remark}

Recall now that the integral representation (iii) in Theorem \ref{theorem
structure of omega copy(1)} may not be unique, i.e., $\mathit{\Omega }_{%
\mathfrak{m}}^{\sharp }$ may not be a Choquet simplex (Definition \ref{gamm
regularisation copy(3)}) as one may conjecture from Theorem \ref{theorem
structure of omega} (ii). For models with purely attractive long--range
interactions for which $\Phi _{a,+}=\Phi _{a,+}^{\prime }=0$ (a.e.), observe
that $\mathit{\Omega }_{\mathfrak{m}}^{\sharp }$ cannot generally be
homeomorphic to the Poulsen simplex in contrast to all sets $\{E_{%
\vec{\ell}}\}_{\vec{\ell}\in \mathbb{N}^{d}}$, see Theorem \ref{Thm Poulsen
simplex}. Indeed, the Poulsen simplex%
\index{Simplex!Poulsen} has a dense set of extreme points whereas we have
the following assertion (cf. Theorems \ref{theorem structure of omega
copy(1)} (i), \ref{theorem trivial sympa 1} (ii) and \ref{theorem trivial
sympa 1 copy(1)} (ii)):

\begin{theorem}[Density of $\mathcal{E}(\mathit{\Omega }_{\mathfrak{m}%
}^{\sharp })$ yields convexity of $\mathit{\hat{M}}_{\mathfrak{m}}$]
\label{theorem omega simplex copy(1)}\mbox{ }\newline
If the compact set $\mathit{%
\hat{M}}_{\mathfrak{m}}$ is not convex then $\mathcal{E}(\mathit{\Omega }_{%
\mathfrak{m}}^{\sharp })$ is not dense in $\mathit{\Omega }_{\mathfrak{m}%
}^{\sharp }$.
\end{theorem}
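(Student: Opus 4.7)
My plan is to argue by contraposition: I shall show that if $\mathcal{E}(\mathit{\Omega}_{\mathfrak{m}}^{\sharp})$ is weak${}^{\ast}$--dense in $\mathit{\Omega}_{\mathfrak{m}}^{\sharp}$, then the compact set $\mathit{\hat{M}}_{\mathfrak{m}}$ must be convex. The whole argument is essentially a chase through the chain of inclusions already established earlier in the chapter, together with the fact that $\mathit{\hat{M}}_{\mathfrak{m}}$ is weak${}^{\ast}$--closed.

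The three ingredients I plan to combine are: (a) the inclusion $\mathcal{E}(\mathit{\Omega}_{\mathfrak{m}}^{\sharp})\subseteq \mathit{\hat{M}}_{\mathfrak{m}}$ from Theorem \ref{theorem structure of omega copy(1)} (ii); (b) the weak${}^{\ast}$--compactness (in particular, weak${}^{\ast}$--closedness) of $\mathit{\hat{M}}_{\mathfrak{m}}$ supplied by Lemma \ref{lemma minimum sympa copy(2)} (i); and (c) the identity $\mathit{\Omega}_{\mathfrak{m}}^{\sharp}=\overline{\mathrm{co}(\mathit{\hat{M}}_{\mathfrak{m}})}$ from Theorem \ref{theorem structure of omega copy(1)} (i), which in particular gives $\mathit{\hat{M}}_{\mathfrak{m}}\subseteq \mathit{\Omega}_{\mathfrak{m}}^{\sharp}$.

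Assuming $\overline{\mathcal{E}(\mathit{\Omega}_{\mathfrak{m}}^{\sharp})}=\mathit{\Omega}_{\mathfrak{m}}^{\sharp}$, ingredients (a) and (b) yield
\[
\mathit{\Omega}_{\mathfrak{m}}^{\sharp}
=\overline{\mathcal{E}(\mathit{\Omega}_{\mathfrak{m}}^{\sharp})}
\subseteq \overline{\mathit{\hat{M}}_{\mathfrak{m}}}
=\mathit{\hat{M}}_{\mathfrak{m}}.
\]
Combined with (c) this forces the equality $\mathit{\hat{M}}_{\mathfrak{m}}=\mathit{\Omega}_{\mathfrak{m}}^{\sharp}$. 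Since $\mathit{\Omega}_{\mathfrak{m}}^{\sharp}$ is convex by Lemma \ref{lemma minimum sympa copy(1)}, this yields convexity of $\mathit{\hat{M}}_{\mathfrak{m}}$, which is exactly the contrapositive of the statement.

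There is no real obstacle here: the statement is a short structural consequence of the description of $\mathit{\Omega}_{\mathfrak{m}}^{\sharp}$ as the weak${}^{\ast}$--closed convex hull of $\mathit{\hat{M}}_{\mathfrak{m}}$, coupled with Theorem \ref{theorem structure of omega copy(1)} (ii). The only point worth double-checking in the write-up is that one really uses closedness (not just compactness) of $\mathit{\hat{M}}_{\mathfrak{m}}$ to absorb the weak${}^{\ast}$--closure of $\mathcal{E}(\mathit{\Omega}_{\mathfrak{m}}^{\sharp})$, which is immediate from Lemma \ref{lemma minimum sympa copy(2)} (i).
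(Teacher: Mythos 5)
Your argument is correct and is essentially the paper's own proof: the paper derives this theorem by citing Theorems \ref{theorem trivial sympa 1} (ii) and \ref{theorem trivial sympa 1 copy(1)} (ii) (applied with $h=g_{\mathfrak{m}}$ and $K=E_{1}$), which, once specialized, is exactly your chain $\mathcal{E}(\mathit{\Omega }_{\mathfrak{m}}^{\sharp })\subseteq \mathit{\hat{M}}_{\mathfrak{m}}\subseteq \mathit{\Omega }_{\mathfrak{m}}^{\sharp }$ together with the weak$^{\ast }$--closedness of $\mathit{\hat{M}}_{\mathfrak{m}}$ and the convexity of $\mathit{\Omega }_{\mathfrak{m}}^{\sharp }$. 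Your remark that closedness (not mere compactness) of $\mathit{\hat{M}}_{\mathfrak{m}}$ is the point needed to absorb the closure is well placed.
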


\noindent Note that the convexity of $\mathit{\hat{M}}_{\mathfrak{m}}$ is 
\emph{only} a necessary condition to obtain a dense set $\mathcal{E}(\mathit{%
\Omega }_{\mathfrak{m}}^{\sharp })$ of extreme points of $\mathit{\Omega }_{%
\mathfrak{m}}^{\sharp }$ in $\mathit{\Omega }_{\mathfrak{m}}^{\sharp }$.

The convexity of the set $\mathit{\hat{M}}_{\mathfrak{m}}$ can only be
broken by the long--range attractions $\Phi _{a,-}$ and $\Phi _{a,-}^{\prime
}$, see discussions following Lemma \ref{lemma property reduced free--energy}%
. Note further that sets of generalized t.i. equilibrium states are
simplices for purely attractive long--range models ($\Phi _{a,+}=\Phi
_{a,+}^{\prime }=0$ (a.e.)) as $\mathit{\Omega }_{\mathfrak{m}}^{\sharp }$
is a closed face of $E_{1}$ in this case, see Theorem \ref{theorem purement
repulsif sympa} ($-$). Additionally, by using Theorem \ref{theorem structure
of omega} $\mathit{\Omega }_{\mathfrak{m}}^{\sharp }$ is even a Bauer
simplex (Definition \ref{gamm regularisation copy(2)}) if the following
assumption holds:

\begin{hypothesis}
\label{Uniqueness}\mbox{ }\newline
For any $d_{a,-}\in \mathcal{C}_{\mathfrak{m}}^{\sharp }$, the set $\mathit{M%
}_{\Phi (d_{a,-}+\mathrm{r}_{+}(d_{a,-}))}$ of t.i. equilibrium states of
the approximating interaction $\Phi (d_{a,-}+\mathrm{r}_{+}(d_{a,-}))$
contains exactly one state.
\end{hypothesis}

\begin{theorem}[The set $\mathit{\Omega }_{\mathfrak{m}}^{\sharp }$ for $%
\mathfrak{m}\in \mathcal{M}_{1}$ as a simplex]
\label{theorem omega simplex}\mbox{ }\newline
\emph{(}$-$\emph{) }%
\index{States!generalized equilibrium!simplex}%
\index{Long--range models!purely attractive}If $\Phi _{a,+}=0$ (a.e.) then
the face $\mathit{\Omega }_{\mathfrak{m}}^{\sharp }$ is a Choquet simplex%
\index{Simplex!Choquet}. \newline
\emph{(}$\exists !$\emph{) }Under Hypothesis \ref{Uniqueness} $\mathit{%
\Omega }_{\mathfrak{m}}^{\sharp }$ is a face and a Bauer simplex%
\index{Simplex!Bauer}.
\end{theorem}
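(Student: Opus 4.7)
The plan is to treat the two parts separately, leveraging the detailed description of $\mathit{\Omega}_{\mathfrak{m}}^{\sharp}$ from Theorems \ref{theorem structure of omega copy(1)} and \ref{theorem structure of omega} together with the Choquet--simplex structure of $E_1$.

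For part ($-$), I would appeal directly to Theorem \ref{theorem purement repulsif sympa} ($-$), which shows that $\mathit{\Omega}_{\mathfrak{m}}^{\sharp} = \mathit{M}_{\mathfrak{m}}^{\sharp}$ is a closed face of the Poulsen simplex $E_1$. Since $E_1$ is a Choquet simplex (Theorem \ref{theorem choquet}) and closed faces of Choquet simplices are Choquet simplices, the assertion is immediate.

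For part ($\exists!$), my strategy is first to exhibit $\mathit{\hat{M}}_{\mathfrak{m}}$ as a weak$^{\ast}$--compact subset of the extreme boundary $\mathcal{E}_1$ of $E_1$; the remainder will then follow from the Choquet structure of $E_1$. By Theorem \ref{theorem structure of omega} (i) combined with Hypothesis \ref{Uniqueness}, each non--empty set $\mathit{\Omega}_{\mathfrak{m}}^{\sharp}(d_{a,-} + \mathrm{r}_{+}(d_{a,-}))$ is a singleton $\{\omega_{d_{a,-}}\}$ whose unique element is the sole t.i.\ equilibrium state of the local approximating interaction $\Phi(d_{a,-} + \mathrm{r}_{+}(d_{a,-}))$; since a singleton face of $E_1$ is an extreme point of $E_1$, this gives $\mathit{\hat{M}}_{\mathfrak{m}} \subseteq \mathcal{E}_1$. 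I would then establish that the parameterization $d_{a,-} \mapsto \omega_{d_{a,-}}$ is continuous from the weak topology on $L_{-}^{2}(\mathcal{A},\mathbb{C})$ to the weak$^{\ast}$--topology on $E_1$: Lemma \ref{lemma idiot interaction approx 2 copy(3)} gives weak--to--norm continuity of $d_{a,-} \mapsto \Phi(d_{a,-} + \mathrm{r}_{+}(d_{a,-})) \in \mathcal{W}_1$, and under uniqueness the tangent--functional map $\Phi \mapsto \omega_{\Phi}$ is norm--to--weak$^{\ast}$ continuous by upper semi--continuity of the subdifferential of the convex continuous pressure in (\ref{legendre transform free energy}), combined with the identification $\mathit{T}_{\mathfrak{m}}^{\sharp} = \mathit{\Omega}_{\mathfrak{m}}^{\sharp}$ of Theorem \ref{eq.tang.bcs.type}. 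Weak compactness of $\mathcal{C}_{\mathfrak{m}}^{\sharp}$ (Lemma \ref{lemma idiot interaction approx 2} ($\sharp$)) then makes $\mathit{\hat{M}}_{\mathfrak{m}}$ weak$^{\ast}$--compact.

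With these ingredients in hand, the proof concludes as follows. By Theorem \ref{theorem structure of omega copy(1)} (i), $\mathit{\Omega}_{\mathfrak{m}}^{\sharp} = \overline{\mathrm{co}(\mathit{\hat{M}}_{\mathfrak{m}})}$, and weak$^{\ast}$--compactness of $\mathit{\hat{M}}_{\mathfrak{m}}$ ensures every $\omega \in \mathit{\Omega}_{\mathfrak{m}}^{\sharp}$ is the barycenter of a probability measure $\mu_{\omega}$ supported on $\mathit{\hat{M}}_{\mathfrak{m}} \subseteq \mathcal{E}_1$. Since $E_1$ is a Choquet simplex, $\mu_{\omega}$ is then the unique maximal measure representing $\omega$; if $\omega = \lambda \rho_1 + (1-\lambda)\rho_2$ with $\rho_1,\rho_2 \in E_1$ and $\lambda \in (0,1)$, this uniqueness forces the maximal measures of $\rho_1,\rho_2$ to be supported on $\mathit{\hat{M}}_{\mathfrak{m}}$ as well, so $\rho_1,\rho_2 \in \mathit{\Omega}_{\mathfrak{m}}^{\sharp}$ and $\mathit{\Omega}_{\mathfrak{m}}^{\sharp}$ is a face. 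The same uniqueness identifies $\mathit{\Omega}_{\mathfrak{m}}^{\sharp}$ as a Choquet simplex, while Theorem \ref{theorem structure of omega copy(1)} (ii) combined with the inclusion $\mathit{\hat{M}}_{\mathfrak{m}} \subseteq \mathcal{E}_1 \cap \mathit{\Omega}_{\mathfrak{m}}^{\sharp} \subseteq \mathcal{E}(\mathit{\Omega}_{\mathfrak{m}}^{\sharp})$ yields $\mathcal{E}(\mathit{\Omega}_{\mathfrak{m}}^{\sharp}) = \mathit{\hat{M}}_{\mathfrak{m}}$, which is closed — so $\mathit{\Omega}_{\mathfrak{m}}^{\sharp}$ is a Bauer simplex. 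The main obstacle I anticipate is establishing the continuity of the parameterization $d_{a,-} \mapsto \omega_{d_{a,-}}$: this is exactly where Hypothesis \ref{Uniqueness} enters essentially, and it requires combining Mazur--type differentiability arguments for the pressure (Theorem \ref{Mazur}) with upper semi--continuity of the set--valued tangent map.
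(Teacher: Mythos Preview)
Your argument for part ($-$) matches the paper exactly.

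For part ($\exists!$), your overall route is correct and coincides with the paper's: show $\mathit{\hat{M}}_{\mathfrak{m}} \subseteq \mathcal{E}_1$ via Hypothesis \ref{Uniqueness} and Lemma \ref{remark equilibrium state approches}, combine with Theorem \ref{theorem structure of omega copy(1)} (ii) to obtain $\mathcal{E}(\mathit{\Omega}_{\mathfrak{m}}^{\sharp}) = \mathit{\hat{M}}_{\mathfrak{m}}$, note this set is closed, and conclude that $\mathit{\Omega}_{\mathfrak{m}}^{\sharp}$ is a closed face of the simplex $E_1$ and hence a Bauer simplex. Your barycentric justification of the face property is the fleshed--out version of what the paper states in one line.

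However, the ``main obstacle'' you anticipate --- continuity of $d_{a,-} \mapsto \omega_{d_{a,-}}$ --- is entirely unnecessary here. You invoke it only to deduce weak$^{\ast}$--compactness of $\mathit{\hat{M}}_{\mathfrak{m}}$, but that is already established directly in Lemma \ref{lemma minimum sympa copy(2)} (i) from the weak$^{\ast}$--lower semi--continuity of $g_{\mathfrak{m}}$, with no reference to the parameterization by $\mathcal{C}_{\mathfrak{m}}^{\sharp}$. The paper simply cites that lemma and moves on. (The continuity you want is in fact proved later in the paper as Lemma \ref{coro 3.34}, but it is used for the homeomorphism in Theorem \ref{th 3.36}, not for the present theorem.) Dropping this detour makes your proof essentially identical to the paper's, and removes the only step you flagged as delicate.
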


\begin{proof}
The first assertion is trivial. Indeed, by Theorem \ref{theorem choquet},
the set $E_{1}$ is a Choquet simplex and, by Theorem \ref{theorem choquet
bis copy(1)}, its closed faces are Choquet simplices%
\index{Simplex!Choquet}. Then the assertion ($-$) results from Theorem \ref%
{theorem purement repulsif sympa} ($-$).

Assume now that Hypothesis \ref{Uniqueness} holds. Then, as 
\begin{equation*}
\mathit{\Omega }_{\Phi (d_{a,-}+\mathrm{r}_{+}(d_{a,-}))}^{\sharp }=\mathit{M%
}_{\Phi (d_{a,-}+\mathrm{r}_{+}(d_{a,-}))}
\end{equation*}%
is a face of $E_{1}$ (Lemma \ref{remark equilibrium state approches}), its
unique element has to be ergodic and thus extreme in $\mathit{\Omega }_{%
\mathfrak{m}}^{\sharp }$. Hence, using Theorem \ref{theorem structure of
omega}, $\mathit{%
\hat{M}}_{\mathfrak{m}}\subseteq \mathcal{E}(\mathit{\Omega }_{\mathfrak{m}%
}^{\sharp })$. By Theorem \ref{theorem structure of omega copy(1)} (ii), $%
\mathcal{E}(\mathit{\Omega }_{\mathfrak{m}}^{\sharp })\subseteq \mathit{\hat{%
M}}_{\mathfrak{m}}$ and hence, $\mathcal{E}(\mathit{\Omega }_{\mathfrak{m}%
}^{\sharp })=\mathit{\hat{M}}_{\mathfrak{m}}$ is a closed set as $\mathit{%
\hat{M}}_{\mathfrak{m}}$ is weak$^{\ast }$--compact (cf. Lemma \ref{lemma
minimum sympa copy(2)} (i)). In particular, because $\mathit{\hat{M}}_{%
\mathfrak{m}}\subseteq \mathcal{E}_{1}$, $\mathit{\Omega }_{\mathfrak{m}%
}^{\sharp }$ is a closed face of $E_{1}$ and it is thus a Bauer simplex. 
\end{proof}%

If $\mathit{\Omega }_{\mathfrak{m}}^{\sharp }$ is a Bauer simplex (for
instance if Hypothesis \ref{Uniqueness} holds) then, by Theorem \ref{theorem
Bauer}, the generalized t.i. equilibrium states of $\mathfrak{m}$ can be --
affinely and homeomorphicaly -- identified with states on the commutative $%
C^{\ast }$--algebra $C(\mathcal{E}(\mathit{\Omega }_{\mathfrak{m}}^{\sharp
}))$. For instance, Hypothesis \ref{Uniqueness} is satisfied if, for any $%
d_{a,-}\in \mathcal{C}_{\mathfrak{m}}^{\sharp }$, the approximating
interaction%
\index{Interaction!approximating!temperature--dependant} 
\begin{equation*}
\Phi (d_{a,-}+\mathrm{r}_{+}(d_{a,-}))\in \mathcal{W}_{1}
\end{equation*}%
is either quadratic in the annihilation and creation operators $a_{x,\mathrm{%
s}}$, $a_{x^{\prime },\mathrm{s}^{\prime }}^{+}$ in any dimension ($d\geq 1$%
) or corresponds to a finite range one--dimensional ($d=1$) Fermi system.
These conditions hold for many relevant models coming from Physics, like
BCS--type models.

This case has also a specific interpretation in terms of game theory as $%
\mathcal{C}_{\mathfrak{m}}^{\sharp }$ (\ref{eq conserve strategy}) is the
set of conservative strategies of the attractive player of the corresponding
thermodynamic game defined by Definition \ref{definition two--person
zero--sum game}:

\begin{theorem}[Mixed conservative strategies of the attractive player]
\label{th 3.36}%
\index{Thermodynamic game!mixed conservative strategies}For any $\mathfrak{m}%
\in \mathcal{M}_{1}$ satisfying Hypothesis \ref{Uniqueness}, there is an
affine homeomorphism between $\mathit{\Omega }_{\mathfrak{m}}^{\sharp }$ and
the set of states of the commutative $C^{\ast }$--algebra $C(\mathcal{C}_{%
\mathfrak{m}}^{\sharp })$ of continuous functions on the (weakly compact)
set $\mathcal{C}_{\mathfrak{m}}^{\sharp }$. Here, the homeomorphism concerns
the weak$^{\ast }$--topologies in the sets $\mathit{\Omega }_{\mathfrak{m}%
}^{\sharp }$ and $C(\mathcal{C}_{\mathfrak{m}}^{\sharp })$.
\end{theorem}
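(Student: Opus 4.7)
The plan is to combine the characterization of $\mathit{\Omega}_{\mathfrak{m}}^{\sharp}$ as a Bauer simplex (Theorem \ref{theorem omega simplex} ($\exists!$)) with the gap-equation description of its extreme boundary (Theorem \ref{theorem structure of omega}), producing a natural bijection between $\mathcal{C}_{\mathfrak{m}}^{\sharp}$ and $\mathcal{E}(\mathit{\Omega}_{\mathfrak{m}}^{\sharp})$, and then to apply the standard Bauer-simplex/commutative $C^{\ast}$-algebra correspondence (Theorem \ref{theorem Bauer}) to pull back the state space.

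First I would recall from the proof of Theorem \ref{theorem omega simplex} ($\exists!$) that under Hypothesis \ref{Uniqueness} one has the equality $\mathcal{E}(\mathit{\Omega}_{\mathfrak{m}}^{\sharp}) = \mathit{\hat{M}}_{\mathfrak{m}}$, that this set is weak$^{\ast}$--compact, and that $\mathit{\Omega}_{\mathfrak{m}}^{\sharp}$ is a Bauer simplex. Next, I would define a map $\iota:\mathcal{C}_{\mathfrak{m}}^{\sharp}\to\mathcal{E}(\mathit{\Omega}_{\mathfrak{m}}^{\sharp})$ by assigning to each $d_{a,-}\in\mathcal{C}_{\mathfrak{m}}^{\sharp}$ the unique element of $\mathit{\Omega}_{\mathfrak{m}}^{\sharp}(d_{a,-}+\mathrm{r}_{+}(d_{a,-}))$. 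The existence and uniqueness of this element follow from Hypothesis \ref{Uniqueness} combined with the inclusion $\mathit{\Omega}_{\mathfrak{m}}^{\sharp}(c_{a})\subseteq\mathit{M}_{\Phi(c_{a})}$ in (\ref{subset of a face}) and from the non-emptiness and pairwise disjointness of the sets $\mathit{\Omega}_{\mathfrak{m}}^{\sharp}(d_{a,-}+\mathrm{r}_{+}(d_{a,-}))$ used in Theorem \ref{theorem structure of omega}. Theorem \ref{theorem structure of omega} (i) together with the same uniqueness input then forces $\iota$ to be a bijection.

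The main obstacle is showing that $\iota$ is a homeomorphism for the weak topology on $\mathcal{C}_{\mathfrak{m}}^{\sharp}\subseteq L_{-}^{2}(\mathcal{A},\mathbb{C})$ and the weak$^{\ast}$ topology on $\mathcal{E}(\mathit{\Omega}_{\mathfrak{m}}^{\sharp})\subseteq E_{1}$. Since both sets are compact Hausdorff (the first by Lemma \ref{lemma idiot interaction approx 2} ($\sharp$), the second because $\mathcal{E}(\mathit{\Omega}_{\mathfrak{m}}^{\sharp})=\mathit{\hat{M}}_{\mathfrak{m}}$ is weak$^{\ast}$--compact), it suffices to verify continuity of $\iota$. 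For a sequence $d_{a,-}^{(n)}\to d_{a,-}$ weakly in $L_{-}^{2}(\mathcal{A},\mathbb{C})$, the thermodynamic decision rule $\mathrm{r}_{+}$ is continuous from the weak topology on $L_{-}^{2}$ to the norm topology on $L_{+}^{2}$ by Lemma \ref{lemma idiot interaction approx 2 copy(3)}, hence the corresponding approximating interactions $\Phi(d_{a,-}^{(n)}+\mathrm{r}_{+}(d_{a,-}^{(n)}))$ converge in $\mathcal{W}_{1}$ to $\Phi(d_{a,-}+\mathrm{r}_{+}(d_{a,-}))$. Under Hypothesis \ref{Uniqueness} the target approximating interaction has a unique t.i.\ equilibrium state, so by the tangent-functional characterization of Theorem \ref{eq.tang.bcs.type} applied to the local models $(\Phi(\,\cdot\,),0,0)$, together with weak$^{\ast}$--compactness of $E_{1}$ and local Lipschitz continuity of the pressure (Theorem \ref{BCS main theorem 1} (ii)), every weak$^{\ast}$--accumulation point of $\iota(d_{a,-}^{(n)})$ must equal $\iota(d_{a,-})$. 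This yields continuity and thus makes $\iota$ a homeomorphism.

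Finally, since $\mathit{\Omega}_{\mathfrak{m}}^{\sharp}$ is a Bauer simplex with closed extreme boundary, Theorem \ref{theorem Bauer} provides an affine homeomorphism between $\mathit{\Omega}_{\mathfrak{m}}^{\sharp}$ and the state space of $C(\mathcal{E}(\mathit{\Omega}_{\mathfrak{m}}^{\sharp}))$, where the state space carries its weak$^{\ast}$ topology. The homeomorphism $\iota$ induces a $C^{\ast}$--algebra isomorphism $f\mapsto f\circ\iota^{-1}$ from $C(\mathcal{C}_{\mathfrak{m}}^{\sharp})$ to $C(\mathcal{E}(\mathit{\Omega}_{\mathfrak{m}}^{\sharp}))$, which transports this affine weak$^{\ast}$--homeomorphism to the state space of $C(\mathcal{C}_{\mathfrak{m}}^{\sharp})$. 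Composition of these two affine homeomorphisms yields the desired identification and completes the proof.
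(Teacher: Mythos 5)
Your proposal is correct and follows essentially the same route as the paper: the paper's proof cites Theorem \ref{theorem omega simplex}, Theorem \ref{theorem Bauer}, and Corollary \ref{coro 3.34 copy(1)}, and your construction of the bijection $\iota$, its weak--weak$^{\ast}$ continuity via Lemma \ref{lemma idiot interaction approx 2 copy(3)} and the continuity of the pressure, and the transport of the Bauer-simplex state-space identification through the induced $C^{\ast}$--isomorphism is precisely the content of Lemma \ref{coro 3.34}, Corollary \ref{coro 3.34 copy(1)}, and the paragraph following it. The only cosmetic difference is that you phrase the accumulation-point step via tangent functionals (Theorem \ref{eq.tang.bcs.type}) where the paper argues directly with the variational characterization of the pressure; the two are interchangeable here.
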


\begin{proof}
This results is a direct consequence of Theorems \ref{theorem omega simplex}
and \ref{theorem Bauer} combined with Corollary \ref{coro 3.34 copy(1)}.%
\end{proof}%

This last result can be interpreted from the point of view of game theory as
follows. By the Riesz--Markov theorem, the set of states on $C(\mathcal{C}_{%
\mathfrak{m}}^{\sharp })$ is the same as the set of probability measures on
the set $\mathcal{C}_{\mathfrak{m}}^{\sharp }$ of conservative strategies of
the attractive player. As discussed above, the best the attractive player
can do -- as she/he has no access to the choice of strategy of the repulsive
one -- is to choose some conservative strategy in order to minimize her/his
loss in the game. She/he could also do this in a non--deterministic way.
I.e., she/he determines with which probability distribution the different
conservative strategies have to be chosen. This kind of procedure is called 
\emph{mixed strategy}%
\index{Zero--sum games!mixed strategy} in game theory. Hence, the set of all
generalized t.i. equilibrium states is -- in the situation of Theorem \ref%
{th 3.36} above -- (even affinely) the same as the set of all mixed
conservative strategies of the attractive player of the thermodynamic game.

Now, we observe that Theorem \ref{theorem saddle point} ($\sharp $) tell us
that the conservative value $\mathrm{F}_{\mathfrak{m}}^{\sharp }$ for the
thermodynamic game defined in Definition \ref{definition two--person
zero--sum game} leads to the pressure $\mathrm{P}_{\mathfrak{m}}^{\sharp }$
(up to a minus sign) for any model $\mathfrak{m}\in \mathcal{M}_{1}$. In
other words, the approximating Hamiltonian method \cite%
{Bogjunior,approx-hamil-method0,approx-hamil-method,approx-hamil-method2}
(see Section \ref{Section approx method}) extended to all $\mathfrak{m}\in 
\mathcal{M}_{1}$ is still an efficient technique to obtain the pressure. On
the other hand, the min--max variational problem $\mathrm{F}_{\mathfrak{m}%
}^{\sharp }$ is related via (\ref{pression approximated}) and (\ref%
{variational problem approx}) to the family%
\index{Interaction!approximating!temperature--dependant} 
\begin{equation*}
\{\Phi (d_{a,-}+\mathrm{r}_{+}(d_{a,-}))\}_{d_{a,-}\in \mathcal{C}_{%
\mathfrak{m}}^{\sharp }}
\end{equation*}%
of approximating interactions (Definition \ref{definition BCS-type model
approximated}) with $\mathrm{r}_{+}\in \mathrm{C}\left(
L_{-}^{2},L_{+}^{2}\right) $ defined by (\ref{thermodyn decision rule}).
Therefore, for any model $\mathfrak{m}\in \mathcal{M}_{1}$, one could, a
priori, think that the weak$^{\ast }$--closed convex hull of the union of
the family 
\begin{equation*}
\{\mathit{M}_{\Phi \left( d_{a,-}+\mathrm{r}_{+}(d_{a,-})\right)
}\}_{d_{a,-}\in \mathcal{C}_{\mathfrak{m}}^{\sharp }}
\end{equation*}%
of sets of t.i. equilibrium states (cf. Lemma \ref{remark equilibrium state
approches}) equals the set $\mathit{\Omega }_{\mathfrak{m}}^{\sharp }$ of
generalized t.i. equilibrium states. This fact is generally wrong, i.e., the
approximating Hamiltonian method does not generally lead to an \emph{%
effective local theory}.

To explain this, we define more precisely the notion of \emph{theory} as
follows:

\begin{definition}[Theory for $\mathfrak{m}\in \mathcal{M}_{1}$]
\label{definition theory}\mbox{ }\newline
\index{Theory}A theory for $\mathfrak{m}\in \mathcal{M}_{1}$ is any subset $%
\mathfrak{T}_{\mathfrak{m}}\subseteq \mathcal{M}_{1}$.
\end{definition}

\noindent Of course, a good theory $\mathfrak{T}_{\mathfrak{m}}$ for $%
\mathfrak{m}\in \mathcal{M}_{1}$ means that elements of $\mathfrak{T}_{%
\mathfrak{m}}$ are simplified models in comparison with $\mathfrak{m}\in 
\mathcal{M}_{1}$ and that it allows the complete description of the set $%
\mathit{\Omega }_{\mathfrak{m}}^{\sharp }$ of generalized t.i. equilibrium
states. This last property corresponds to have an \emph{effective} theory in
the following sense:

\begin{definition}[Effective theory]
\label{definition effective theories}\mbox{ }\newline
\index{Theory!effective}A theory $\mathfrak{T}_{\mathfrak{m}}$ for $%
\mathfrak{m}\in \mathcal{M}_{1}$ is said to be effective at $\beta \in
(0,\infty )$ iff 
\begin{equation*}
\overline{\mathrm{co}%
\big(%
\underset{\mathfrak{\hat{m}}\in \mathfrak{T}_{\mathfrak{m}}}{\cup }\mathit{%
\Omega }_{\mathfrak{\hat{m}}}^{\sharp }%
\big)%
}=\mathit{\Omega }_{\mathfrak{m}}^{\sharp }\quad \text{and}\quad \mathcal{E}(%
\mathit{\Omega }_{\mathfrak{m}}^{\sharp })\subseteq \underset{\mathfrak{\hat{%
m}}\in \mathfrak{T}_{\mathfrak{m}}}{\cup }\mathit{\Omega }_{\mathfrak{\hat{m}%
}}^{\sharp }.
\end{equation*}%
The closure is taken in the weak$^{\ast }$--topology and $\mathrm{co}(M)$
denotes as usual the convex hull of a set $M\subseteq \mathcal{U}^{\ast }$.
\end{definition}

\noindent The second\ condition in the above definition means that any pure
generalized equilibrium state of $\mathfrak{m}$ should be a generalized t.i.
equilibrium state of $\mathfrak{\hat{m}}$ for some $\mathfrak{\hat{m}}\in 
\mathfrak{T}_{\mathfrak{m}}$ in the theory $\mathfrak{T}_{\mathfrak{m}}$. By
Theorem \ref{lemma Milman} (ii) (Milman theorem), this holds if the union $%
\underset{\mathfrak{\hat{m}}\in \mathfrak{T}_{\mathfrak{m}}}{\cup }\mathit{%
\Omega }_{\mathfrak{\hat{m}}}^{\sharp }$ is closed w.r.t. the weak$^{\ast }$%
--topology. This is the case in the examples of effective theories discussed
here. Two general classes of theories are of particular importance w.r.t.
models $\mathfrak{m}\in \mathcal{M}_{1}$: The \emph{repulsive} and \emph{%
local} theories defined below.

\begin{definition}[Repulsive theory]
\label{definition repulsive theory}\mbox{ }\newline
\index{Theory!repulsive}For $\mathfrak{m}\in \mathcal{M}_{1}$, a theory $%
\mathfrak{T}_{\mathfrak{m}}$ is said to be repulsive iff the subset $%
\mathfrak{T}_{\mathfrak{m}}\subseteq \mathcal{M}_{1}$ has only models with
purely repulsive long--range interactions, i.e., models for which $\Phi
_{a,-}=\Phi _{a,-}^{\prime }=0$ (a.e.), see Definition \ref{long range
attraction-repulsion}.
\end{definition}

\noindent An example of repulsive theory is given by using partially the
approximating Hamiltonian method: For any model $\mathfrak{m}\in \mathcal{M}%
_{1}$ and all $c_{a,-}\in L_{-}^{2}(\mathcal{A},\mathbb{C})$, we define the
approximating repulsive model 
\begin{equation}
\mathfrak{m}\left( c_{a,-}\right) :=(\Phi \left( c_{a,-}\right) ,\{\Phi
_{a,+}\}_{a\in \mathcal{A}},\{\Phi _{a,+}^{\prime }\}_{a\in \mathcal{A}})\in 
\mathcal{M}_{1}.  \label{model purement repulsif}
\end{equation}%
Here, $\Phi _{a,+}:=\gamma _{a,+}\Phi _{a}$ and $\Phi _{a,+}^{\prime
}:=\gamma _{a,+}\Phi _{a}^{\prime }$ (cf. Definition \ref{long range
attraction-repulsion}), whereas $\Phi \left( c_{a,-}\right) $ is defined in
Definition \ref{definition BCS-type model approximated}. Since $\mathfrak{m}%
\left( c_{a,-}\right) $ is a model with purely repulsive long--range
interactions for all $c_{a,-}\in L_{-}^{2}(\mathcal{A},\mathbb{C})$, it can
be used to define a repulsive theory as follows:

\begin{definition}[The min repulsive theory]
\label{definition min repulsive theory}\mbox{ }\newline
\index{Theory!repulsive!min}At $\beta \in (0,\infty )$, the min repulsive
theory for $\mathfrak{m}\in \mathcal{M}_{1}$ is the subset%
\begin{equation*}
\mathfrak{T}_{\mathfrak{m}}^{+}:=\underset{d_{a,-}\in \mathcal{C}_{\mathfrak{%
m}}^{\sharp }}{\cup }\mathfrak{m}\left( d_{a,-}\right) \subseteq \mathcal{M}%
_{1}
\end{equation*}%
with the set $\mathcal{C}_{\mathfrak{m}}^{\sharp }$ of conservative
strategies of the attractive player defined by (\ref{eq conserve strategy}).
\end{definition}

\noindent Observe that $\mathfrak{m}\left( d_{a,-}\right) $ has a local
(effective) interaction $\Phi (d_{a,-})$ non--trivially depending on the
inverse temperature $\beta >0$ of the system (cf. Remark \ref{Temperature of
Fermi systems}). In other words, the min repulsive theory $\mathfrak{T}_{%
\mathfrak{m}}^{+}$ is \emph{temperature--dependent}.%
\index{Interaction!approximating!temperature--dependant}

\emph{Local} theories are made of subsets of the real Banach space $\mathcal{%
W}_{1}$ of t.i. interactions $\Phi $, see Definition \ref{definition banach
space interaction}.

\begin{definition}[Local theories]
\label{definition local theory}\mbox{ }\newline
\index{Theory!local}A theory $\mathfrak{T}_{\mathfrak{m}}$ for $\mathfrak{m}%
\in \mathcal{M}_{1}$ is said to be local iff $\mathfrak{T}_{\mathfrak{m}%
}\subseteq \mathcal{W}_{1}$, where $\mathcal{W}_{1}$ is seen as a sub--space
of $\mathcal{M}_{1}$.
\end{definition}

\noindent The min--max variational problem $\mathrm{F}_{\mathfrak{m}%
}^{\sharp }$ of the thermodynamic game defined by Definition \ref{definition
two--person zero--sum game} leads to an important example of local theories:
The \emph{min--max} local theory, which is also a \emph{%
temperature--dependent} theory.

\begin{definition}[The min--max local theory]
\label{definition effective theories bogo}\mbox{ }\newline
\index{Theory!local!min--max}At $\beta \in (0,\infty )$, the min--max local
theory for $\mathfrak{m}\in \mathcal{M}_{1}$ is the subset%
\begin{equation*}
\mathfrak{T}_{\mathfrak{m}}^{\sharp }:=\underset{d_{a,-}\in \mathcal{C}_{%
\mathfrak{m}}^{\sharp }}{\cup }\Phi (d_{a,-}+\mathrm{r}_{+}(d_{a,-}))%
\subseteq \mathcal{W}_{1},
\end{equation*}%
where the set $\mathcal{C}_{\mathfrak{m}}^{\sharp }$ is defined by (\ref{eq
conserve strategy}) and the map $\mathrm{r}_{+}$ by (\ref{thermodyn decision
rule}).
\end{definition}

To get an effective local theory $\mathfrak{T}_{\mathfrak{m}}$ for a model $%
\mathfrak{m}\in \mathcal{M}_{1}$, the set $\mathit{\Omega }_{\mathfrak{m}%
}^{\sharp }$ of generalized t.i. equilibrium states must be a face. It is a
necessary condition as the weak$^{\ast }$--closed convex hull of the union 
\begin{equation*}
\underset{\Phi \in \mathfrak{T}_{\mathfrak{m}}}{\cup }\mathit{\Omega }_{\Phi
}=\underset{\Phi \in \mathfrak{T}_{\mathfrak{m}}}{\cup }\mathit{M}_{\Phi }
\end{equation*}%
of faces in $E_{1}$ is again a face in $E_{1}$ if 
\begin{equation*}
\overline{\mathrm{co}%
\big(%
\underset{\Phi \in \mathfrak{T}_{\mathfrak{m}}}{\cup }\mathit{\Omega }_{\Phi
}^{\sharp }%
\big)%
}=\mathit{\Omega }_{\mathfrak{m}}^{\sharp }\quad \text{and}\quad \mathcal{E}(%
\mathit{\Omega }_{\mathfrak{m}}^{\sharp })\subseteq \underset{\Phi \in 
\mathfrak{T}_{\mathfrak{m}}}{\cup }\mathit{\Omega }_{\Phi }^{\sharp }.
\end{equation*}%
Indeed, for all $\Phi \in \mathcal{W}_{1}$, the set $\mathit{M}_{\Phi }=%
\mathit{\Omega }_{\Phi }$ is a face by weak$^{\ast }$--lower
semi--continuity and affinity of the functional $f_{\Phi }$, see Lemmata \ref%
{lemma property entropy} (i), \ref{Th.en.func} (i) and Definition \ref%
{Remark free energy density}. Lemma \ref{lemma explosion l du mec
copacabana2} says that $\mathit{\Omega }_{\mathfrak{m}}^{\sharp }$ is
generally not a face in $E_{1}$. As a consequence, we obtain the following
result:

\begin{theorem}[Breakdown of effective local theories]
\label{lemma explosion l homme capacabana1}\mbox{ }\newline
\index{Theory!local!breakdown}At fixed $\beta \in (0,\infty )$, there are
uncountably many $\mathfrak{m}\in \mathcal{M}_{1}$ with no effective local
theory.
\end{theorem}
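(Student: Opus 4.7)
The plan is to reduce the question to a geometric property of $\mathit{\Omega}_{\mathfrak{m}}^{\sharp}$ and then to exhibit many models for which this property fails.

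First, I would prove the implication: if $\mathfrak{m}\in\mathcal{M}_1$ admits an effective local theory $\mathfrak{T}_{\mathfrak{m}}\subseteq \mathcal{W}_1$, then $\mathit{\Omega}_{\mathfrak{m}}^{\sharp}$ is a closed face of $E_1$. For each $\Phi\in\mathcal{W}_1$, the local free--energy density functional $f_{\Phi}$ is affine and weak$^{\ast}$--lower semi--continuous (Lemmata \ref{lemma property entropy} (i) and \ref{Th.en.func} (i)), so $\mathit{M}_{\Phi}=\mathit{\Omega}_{\Phi}$ is a closed face of the Choquet simplex $E_1$. Suppose $\omega=\lambda\rho_1+(1-\lambda)\rho_2\in \mathit{\Omega}_{\mathfrak{m}}^{\sharp}$ with $\lambda\in(0,1)$ and $\rho_i\in E_1$. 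By Theorem \ref{theorem structure of omega copy(1)} (iii), $\omega$ admits an integral representation supported on $\mathcal{E}(\mathit{\Omega}_{\mathfrak{m}}^{\sharp})\subseteq \bigcup_{\Phi\in\mathfrak{T}_{\mathfrak{m}}}\mathit{M}_{\Phi}\subseteq \mathcal{E}_1$; in particular, every extreme point of $\mathit{\Omega}_{\mathfrak{m}}^{\sharp}$ is ergodic. Using the uniqueness of the ergodic decomposition in the Choquet simplex $E_1$ (Theorem \ref{theorem choquet}) together with the face property of each $\mathit{M}_{\Phi}$, one concludes that the ergodic components of $\rho_1$ and $\rho_2$ must themselves lie in $\bigcup_{\Phi\in\mathfrak{T}_{\mathfrak{m}}}\mathit{M}_{\Phi}$, and hence $\rho_1,\rho_2\in\overline{\mathrm{co}}(\bigcup_{\Phi\in\mathfrak{T}_{\mathfrak{m}}}\mathit{\Omega}_{\Phi})=\mathit{\Omega}_{\mathfrak{m}}^{\sharp}$.

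Second, I would construct purely repulsive models $\mathfrak{m}\in\mathcal{M}_1$ (i.e., $\Phi_{a,-}=\Phi_{a,-}^{\prime}=0$ a.e.) for which $\mathit{\Omega}_{\mathfrak{m}}^{\sharp}$ is not a face. Theorem \ref{theorem purement repulsif sympa} ($+$) identifies $\mathit{\Omega}_{\mathfrak{m}}^{\sharp}=\mathit{\hat{M}}_{\mathfrak{m}}$ with the set of minimizers of the convex weak$^{\ast}$--lower semi--continuous functional $g_{\mathfrak{m}}$ over $E_1$. The announced Lemma \ref{lemma explosion l du mec copacabana2} supplies an explicit example where this convex set of minimizers fails to be a face of $E_1$; the mechanism is that $g_{\mathfrak{m}}$ depends on $\rho$ only through the affine map $\rho\mapsto\rho(\mathfrak{e}_{\Phi_a}+i\mathfrak{e}_{\Phi_a^{\prime}})$ through a strictly convex expression, so $\mathit{\hat{M}}_{\mathfrak{m}}$ can be the full preimage of a single value of this affine map, which is typically not a face.

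Finally, to obtain uncountably many such models, I would embed one non--face example $\mathfrak{m}_{0}$ into a one--parameter family. The simplest choice is the rescaling $\mathfrak{m}_{t}:=(t^{2}\Phi,\,0,\,\{t\Phi_{a}\}_{a\in\mathcal{A}})$ for $t>0$, which remains purely repulsive and for which the functional $g_{\mathfrak{m}_{t}}$ equals $t^{2}$ times $g_{\mathfrak{m}_{0}}$ (modulo the entropy contribution). Alternatively, one perturbs the local part of $\mathfrak{m}_{0}$ along a one--parameter family $\{\mathfrak{m}_{0}+(\lambda\Phi^{\prime},0,0):\lambda\in I\}$ for a small interval $I\subseteq\mathbb{R}$ and a suitable $\Phi^{\prime}\in\mathcal{W}_{1}$; the non--face property of $\mathit{\hat{M}}_{\mathfrak{m}}$ is an open condition (inherited from the fact that the minimizer set varies upper semi--continuously in the weak$^{\ast}$--topology while retaining its dimension along a generic direction). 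The main obstacle will be the bookkeeping in this last step: ensuring that an uncountable subset of the family $\{\mathfrak{m}_{t}\}$ yields genuinely different non--face structures, i.e., that the property of lacking an effective local theory is stable under an uncountable perturbation and is not destroyed by a degenerate accident at all but countably many parameters.
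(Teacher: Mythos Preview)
Your overall strategy matches the paper's: (1) show that the existence of an effective local theory forces $\mathit{\Omega}_{\mathfrak{m}}^{\sharp}$ to be a face of $E_1$, and (2) invoke Lemma \ref{lemma explosion l du mec copacabana2} for models where this fails. There are, however, two issues.

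In Step 1, the inclusion $\bigcup_{\Phi\in\mathfrak{T}_{\mathfrak{m}}}\mathit{M}_{\Phi}\subseteq \mathcal{E}_1$ is false: each $\mathit{M}_{\Phi}$ is a closed face of $E_1$, not a subset of extreme points. The correct way to reach ``the ergodic components of $\omega$ lie in $\bigcup_\Phi \mathit{M}_\Phi$'' is to first decompose $\omega$ over $\mathcal{E}(\mathit{\Omega}_{\mathfrak{m}}^{\sharp})$, note that each $\hat{\omega}\in\mathcal{E}(\mathit{\Omega}_{\mathfrak{m}}^{\sharp})$ lies in some $\mathit{M}_{\Phi(\hat{\omega})}$, and then use that $\mathit{M}_{\Phi(\hat{\omega})}$ is a face so the ergodic decomposition of $\hat{\omega}$ is supported in $\mathit{M}_{\Phi(\hat{\omega})}\cap\mathcal{E}_1$. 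You mention the face property, so you seem to have the ingredient; just drop the wrong inclusion and compose the two decompositions.

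Your Step 3 is entirely unnecessary and its arguments do not work. Lemma \ref{lemma explosion l du mec copacabana2} already asserts that there are \emph{uncountably many} models with $\mathit{\Omega}_{\mathfrak{m}}^{\sharp}$ not a face; its proof parametrizes models $\mathfrak{m}_A$ by elements $A$ of an explicit uncountable set $\mathcal{U}^{-}$ (e.g.\ $\lambda a_{x,\mathrm{s}}a_{y,\mathrm{s}}$ for $\lambda\in\mathbb{R}\setminus\{0\}$). So once you cite that lemma, you are done. Your rescaling $\mathfrak{m}_t=(t^2\Phi,0,\{t\Phi_a\})$ does \emph{not} yield $g_{\mathfrak{m}_t}=t^2 g_{\mathfrak{m}_0}$, because the entropy term $-\beta^{-1}s(\rho)$ does not scale with $t$; hence the minimizer sets can change with $t$ and the non-face property need not persist. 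The claim that ``the non-face property is an open condition'' is likewise unjustified: the map $\mathfrak{m}\mapsto\mathit{\hat{M}}_{\mathfrak{m}}$ is only upper semi-continuous, and there is no reason a generic small perturbation preserves the precise convex combination witnessing the failure of the face property.
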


\noindent In particular, the equality $\mathrm{P}_{\mathfrak{m}}^{\sharp }=-%
\mathrm{F}_{\mathfrak{m}}^{\sharp }$ of Theorem \ref{theorem saddle point} ($%
\sharp $) does not necessarily imply that the min--max local theory $%
\mathfrak{T}_{\mathfrak{m}}^{\sharp }$ (Definition \ref{definition effective
theories bogo}) is an effective theory, see Definition \ref{definition
effective theories}. By contrast, the min repulsive theory (Definition \ref%
{definition min repulsive theory}) is always an effective theory:

\begin{theorem}[Effectiveness of the min repulsive theory $\mathfrak{T}_{%
\mathfrak{m}}^{+}$]
\label{lemma explosion l homme capacabana1 copy(1)}\mbox{ }\newline
\index{Theory!repulsive!min}$\mathfrak{T}_{\mathfrak{m}}^{+}$ is an
effective repulsive theory for any $\mathfrak{m}\in \mathcal{M}_{1}$, i.e., 
\begin{equation*}
\overline{\mathrm{co}%
\big(%
\underset{d_{a,-}\in \mathcal{C}_{\mathfrak{m}}^{\sharp }}{\cup }\mathit{%
\Omega }_{\mathfrak{m}\left( d_{a,-}\right) }^{\sharp }%
\big)%
}=\mathit{\Omega }_{\mathfrak{m}}^{\sharp }\quad \text{and}\quad \mathcal{E}(%
\mathit{\Omega }_{\mathfrak{m}}^{\sharp })\subseteq \underset{d_{a,-}\in 
\mathcal{C}_{\mathfrak{m}}^{\sharp }}{\cup }\mathit{\Omega }_{\mathfrak{m}%
\left( d_{a,-}\right) }^{\sharp }.
\end{equation*}
\end{theorem}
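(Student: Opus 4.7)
The plan is to reduce everything to a single identity relating the reduced free--energy density functionals $g_{\mathfrak{m}}$ and $g_{\mathfrak{m}(d_{a,-})}$ on $E_{1}$. I first observe that $\gamma_{a,-}^{2}=\gamma_{a,-}$ and $\gamma_{a,+}\gamma_{a,-}=0$ imply $\gamma_{a}d_{a,-}=-d_{a,-}$, so that by Definition \ref{definition BCS-type model approximated} the local part of $\mathfrak{m}(d_{a,-})$ is
\begin{equation*}
\Phi(d_{a,-})=\Phi-2\func{Re}\langle \Phi_{a}+i\Phi_{a}',d_{a,-}\rangle .
\end{equation*}
Writing $\mathrm{c}_{a}(\rho):=e_{\Phi_{a}}(\rho)+ie_{\Phi_{a}'}(\rho)$ and using that the long--range part of $\mathfrak{m}(d_{a,-})$ is purely repulsive, with components $\{\Phi_{a,+}\}$ and $\{\Phi_{a,+}'\}$, a direct expansion from Definition \ref{Reduced free energy} then produces, for every $\rho\in E_{1}$, the key identity
\begin{equation*}
g_{\mathfrak{m}(d_{a,-})}(\rho)=g_{\mathfrak{m}}(\rho)+\|\gamma_{a,-}\mathrm{c}_{a}(\rho)-d_{a,-}\|_{2}^{2}-\|d_{a,-}\|_{2}^{2}.
\end{equation*}

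Once this identity is in hand, taking the infimum over $E_{1}$ yields immediately $\inf g_{\mathfrak{m}(d_{a,-})}(E_{1})\geq \inf g_{\mathfrak{m}}(E_{1})-\|d_{a,-}\|_{2}^{2}$. For each $d_{a,-}\in \mathcal{C}_{\mathfrak{m}}^{\sharp}$, Theorem \ref{theorem structure of omega}~(i) produces a non--empty set $\mathit{\Omega}_{\mathfrak{m}}^{\sharp}(d_{a,-}+\mathrm{r}_{+}(d_{a,-}))\subseteq \mathit{\hat{M}}_{\mathfrak{m}}$; any $\omega$ in it satisfies $g_{\mathfrak{m}}(\omega)=\inf g_{\mathfrak{m}}(E_{1})$ together with $\mathrm{c}_{a}(\omega)=d_{a,-}+\mathrm{r}_{+}(d_{a,-})$, so $\gamma_{a,-}\mathrm{c}_{a}(\omega)=d_{a,-}$ (because $\mathrm{r}_{+}(d_{a,-})\in L_{+}^{2}$). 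The identity then forces $g_{\mathfrak{m}(d_{a,-})}(\omega)=\inf g_{\mathfrak{m}}(E_{1})-\|d_{a,-}\|_{2}^{2}$, which matches the lower bound above. Hence $\omega$ is a minimizer of $g_{\mathfrak{m}(d_{a,-})}$ over $E_{1}$, and Theorem \ref{theorem purement repulsif sympa}~($+$) applied to the purely repulsive model $\mathfrak{m}(d_{a,-})$ gives
\begin{equation*}
\mathit{\Omega}_{\mathfrak{m}}^{\sharp}(d_{a,-}+\mathrm{r}_{+}(d_{a,-}))\subseteq \mathit{\hat{M}}_{\mathfrak{m}(d_{a,-})}=\mathit{\Omega}_{\mathfrak{m}(d_{a,-})}^{\sharp}.
\end{equation*}

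Conversely, the same identity forces any $\omega\in \mathit{\Omega}_{\mathfrak{m}(d_{a,-})}^{\sharp}$ to satisfy both $\gamma_{a,-}\mathrm{c}_{a}(\omega)=d_{a,-}$ and $g_{\mathfrak{m}}(\omega)=\inf g_{\mathfrak{m}}(E_{1})$, since $g_{\mathfrak{m}(d_{a,-})}(\omega)=\inf g_{\mathfrak{m}}(E_{1})-\|d_{a,-}\|_{2}^{2}$ together with the non--negativity of the square term and of the excess $g_{\mathfrak{m}}(\omega)-\inf g_{\mathfrak{m}}(E_{1})$ leave no slack. Thus $\mathit{\Omega}_{\mathfrak{m}(d_{a,-})}^{\sharp}\subseteq \mathit{\hat{M}}_{\mathfrak{m}}\subseteq \mathit{\Omega}_{\mathfrak{m}}^{\sharp}$ by Theorem \ref{theorem structure of omega copy(1)}~(i). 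Since $\mathit{\Omega}_{\mathfrak{m}}^{\sharp}$ is convex and weak$^{\ast}$--closed (Lemma \ref{lemma minimum sympa copy(1)}), this yields the inclusion $\overline{\mathrm{co}(\cup_{d_{a,-}\in \mathcal{C}_{\mathfrak{m}}^{\sharp}}\mathit{\Omega}_{\mathfrak{m}(d_{a,-})}^{\sharp})}\subseteq \mathit{\Omega}_{\mathfrak{m}}^{\sharp}$.

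For the reverse convex--hull inclusion, I combine the first inclusion above with Theorem \ref{theorem structure of omega}~(i) and Theorem \ref{theorem structure of omega copy(1)}~(i) to get $\mathit{\Omega}_{\mathfrak{m}}^{\sharp}=\overline{\mathrm{co}(\mathit{\hat{M}}_{\mathfrak{m}})}\subseteq \overline{\mathrm{co}(\cup_{d_{a,-}}\mathit{\Omega}_{\mathfrak{m}(d_{a,-})}^{\sharp})}$; the extreme--point statement $\mathcal{E}(\mathit{\Omega}_{\mathfrak{m}}^{\sharp})\subseteq \cup_{d_{a,-}\in \mathcal{C}_{\mathfrak{m}}^{\sharp}}\mathit{\Omega}_{\mathfrak{m}(d_{a,-})}^{\sharp}$ then follows from Theorem \ref{theorem structure of omega copy(1)}~(ii) together with those same two inclusions. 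The only delicate step in this plan is the derivation of the key identity, which requires careful bookkeeping with the splitting $\gamma_{a}=\gamma_{a,+}-\gamma_{a,-}$ and with the $\mathcal{L}^{2}(\mathcal{A},\mathcal{W}_{1})$--scalar product; everything after that is a direct chain of implications from results already stated in the monograph.
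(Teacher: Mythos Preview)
Your proof is correct and takes essentially the same route as the paper. Your key identity $g_{\mathfrak{m}(d_{a,-})}(\rho)=g_{\mathfrak{m}}(\rho)+\|\gamma_{a,-}\mathrm{c}_{a}(\rho)-d_{a,-}\|_{2}^{2}-\|d_{a,-}\|_{2}^{2}$ is exactly the completing-the-square computation behind Lemma~\ref{eq idiot sympa}, and the inclusions you extract from it reproduce inline the content of Lemmata~\ref{lemma explosion l du mec copacabana1 copy(2)} and~\ref{lemma explosion l du mec copacabana cas repulsif} (and their combination in Corollary~\ref{corollary explosion l du mec copacabana final copy(1)}), after which both you and the paper finish by invoking Theorems~\ref{theorem structure of omega copy(1)} and~\ref{theorem structure of omega}.
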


\begin{proof}
This follows from Lemmata \ref{lemma explosion l du mec copacabana1 copy(2)}
and \ref{lemma explosion l du mec copacabana cas repulsif} which yield in
particular the equality 
\begin{equation*}
\mathit{\Omega }_{\mathfrak{m}\left( d_{a,-}\right) }^{\sharp }=\mathit{%
\Omega }_{\mathfrak{m}}^{\sharp }\left( d_{a,-}+\mathrm{r}%
_{+}(d_{a,-})\right)
\end{equation*}%
for all $d_{a,-}\in \mathcal{C}_{\mathfrak{m}}^{\sharp }$. See also Theorems %
\ref{theorem structure of omega copy(1)} (i) and \ref{theorem structure of
omega} (i).%
\end{proof}%

Therefore, the breakdown of effective local theories results from
long--range repulsions $\Phi _{a,+},\Phi _{a,+}^{\prime }$ and not from
long--range attractions $\Phi _{a,-},\Phi _{a,-}^{\prime }$, see Definition %
\ref{long range attraction-repulsion}. This is another strong asymmetry
between both long--range interactions. To illustrate this, observe that for
models $\mathfrak{m}$ with $\Phi _{a,+}=\Phi _{a,+}^{\prime }=0$ (a.e.), the
min repulsive and the min--max local theories are the same, i.e., $\mathfrak{%
T}_{\mathfrak{m}}^{+}=\mathfrak{T}_{\mathfrak{m}}^{\sharp }$, see
Definitions \ref{definition min repulsive theory} and \ref{definition
effective theories bogo} together with Definition \ref{definition BCS-type
model approximated} and (\ref{model purement repulsif}). In this purely
attractive case, for all $d_{a,-}\in \mathcal{C}_{\mathfrak{m}}^{\sharp }$, $%
\mathit{\Omega }_{\mathfrak{m}\left( d_{a,-}\right) }^{\sharp }=\mathit{M}%
_{\Phi \left( d_{a,-}\right) }$ is always a face in $E_{1}$ and so is the
set $\mathit{\Omega }_{\mathfrak{m}}^{\sharp }$ by Theorem \ref{lemma
explosion l homme capacabana1 copy(1)}. In other words, if the long--range
repulsions $\Phi _{a,+}$ and $\Phi _{a,+}^{\prime }$ are switched off, there
is always an effective local theory.

In the general case, the min--max local theory $\mathfrak{T}_{\mathfrak{m}%
}^{\sharp }$ (Definition \ref{definition effective theories bogo}) is not
accurate enough. It means that the set $\mathit{\Omega }_{\mathfrak{m}%
}^{\sharp }$ of generalized t.i. equilibrium states is only included in (but
generally not equal to) the weak$^{\ast }$--closed convex hull of the set 
\begin{equation}
\mathit{M}(\mathfrak{T}_{\mathfrak{m}}^{\sharp }):=\underset{d_{a,-}\in 
\mathcal{C}_{\mathfrak{m}}^{\sharp }}{\cup }\mathit{M}_{\Phi (d_{a,-}+%
\mathrm{r}_{+}(d_{a,-}))}^{\sharp }.
\label{union equilibrium states min max theory}
\end{equation}%
This result is a simple corollary of Theorems \ref{theorem structure of
omega copy(1)} (i) and \ref{theorem structure of omega} (i):

\begin{corollary}[Accuracy of the min--max local theory $\mathfrak{T}_{%
\mathfrak{m}}^{\sharp }$]
\label{lemma explosion l homme capacabana1 copy(2)}\mbox{ }\newline
For any $\mathfrak{m}\in \mathcal{M}_{1}$,%
\index{Theory!local!min--max} 
\begin{equation*}
\mathit{\Omega }_{\mathfrak{m}}^{\sharp }\subseteq 
\overline{\mathrm{co}%
\big(%
\mathit{M}(\mathfrak{T}_{\mathfrak{m}}^{\sharp })%
\big)%
}.
\end{equation*}
\end{corollary}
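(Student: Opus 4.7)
The plan is to deduce the inclusion by simply chaining the two theorems cited in the statement, together with the definition of the sets $\mathit{\Omega}_{\mathfrak{m}}^{\sharp}(c_{a})$ in (\ref{subset of a face}). I expect no serious obstacle; the argument should fit in a few lines.

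First, I would invoke Theorem \ref{theorem structure of omega copy(1)} (i), which gives the representation
\begin{equation*}
\mathit{\Omega}_{\mathfrak{m}}^{\sharp} \;=\; \overline{\mathrm{co}(\mathit{\hat{M}}_{\mathfrak{m}})}.
\end{equation*}
Next, I would substitute the explicit description of $\mathit{\hat{M}}_{\mathfrak{m}}$ furnished by Theorem \ref{theorem structure of omega} (i), namely
\begin{equation*}
\mathit{\hat{M}}_{\mathfrak{m}} \;=\; \bigcup_{d_{a,-}\in\mathcal{C}_{\mathfrak{m}}^{\sharp}} \mathit{\Omega}_{\mathfrak{m}}^{\sharp}\!\left(d_{a,-}+\mathrm{r}_{+}(d_{a,-})\right).
\end{equation*}

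Then I would use the definition (\ref{subset of a face}) of $\mathit{\Omega}_{\mathfrak{m}}^{\sharp}(c_{a})$, according to which $\mathit{\Omega}_{\mathfrak{m}}^{\sharp}(c_{a}) \subseteq \mathit{M}_{\Phi(c_{a})}$ for every $c_{a} \in L^{2}(\mathcal{A},\mathbb{C})$. Since the approximating interaction $\Phi(c_{a})$ lies in $\mathcal{W}_{1}$ (Definition \ref{definition BCS-type model approximated}), the identification (\ref{equivalence def equilibrium states}) gives $\mathit{M}_{\Phi(c_{a})} = \mathit{M}_{\Phi(c_{a})}^{\sharp}$ (viewed as a subset of $E_{1}$ via the embedding of $\mathcal{W}_{1}$ in $\mathcal{M}_{1}$). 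Applying this pointwise in the union and using monotonicity of $\overline{\mathrm{co}(\,\cdot\,)}$ gives
\begin{equation*}
\mathit{\Omega}_{\mathfrak{m}}^{\sharp} \;\subseteq\; \overline{\mathrm{co}\!\left(\bigcup_{d_{a,-}\in\mathcal{C}_{\mathfrak{m}}^{\sharp}} \mathit{M}_{\Phi(d_{a,-}+\mathrm{r}_{+}(d_{a,-}))}^{\sharp}\right)} \;=\; \overline{\mathrm{co}\!\left(\mathit{M}(\mathfrak{T}_{\mathfrak{m}}^{\sharp})\right)},
\end{equation*}
the last equality being the very definition (\ref{union equilibrium states min max theory}) of $\mathit{M}(\mathfrak{T}_{\mathfrak{m}}^{\sharp})$. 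This closes the argument.

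The only point that deserves attention — though not really an obstacle — is to confirm that there is no mismatch of notation between $\mathit{M}_{\Phi(c_{a})}$ (the face of minimizers of the local free--energy functional $f_{\Phi(c_{a})}$, as in Lemma \ref{remark equilibrium state approches}) and $\mathit{M}_{\Phi(c_{a})}^{\sharp}$ (the set appearing in the statement of the corollary, i.e.\ the minimizer set of $f_{\mathfrak{m}'}^{\sharp}$ for the purely local model $\mathfrak{m}' = (\Phi(c_{a}),0,0)\in\mathcal{M}_{1}$). This identification is exactly the content of (\ref{equivalence def equilibrium states}), so the chain of inclusions closes cleanly and the corollary follows.
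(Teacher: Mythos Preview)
Your argument is correct and follows exactly the same route as the paper: combine Theorem \ref{theorem structure of omega copy(1)} (i) with Theorem \ref{theorem structure of omega} (i) to write $\mathit{\Omega}_{\mathfrak{m}}^{\sharp}$ as the closed convex hull of $\bigcup_{d_{a,-}\in\mathcal{C}_{\mathfrak{m}}^{\sharp}} \mathit{\Omega}_{\mathfrak{m}}^{\sharp}(d_{a,-}+\mathrm{r}_{+}(d_{a,-}))$, then use the inclusion $\mathit{\Omega}_{\mathfrak{m}}^{\sharp}(c_{a})\subseteq \mathit{M}_{\Phi(c_{a})}$ from (\ref{subset of a face}) and the definition (\ref{union equilibrium states min max theory}). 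The paper states this in a single sentence without spelling out the notational identification $\mathit{M}_{\Phi(c_{a})}=\mathit{M}_{\Phi(c_{a})}^{\sharp}$ that you carefully justify via (\ref{equivalence def equilibrium states}), so your write-up is in fact slightly more detailed.
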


\noindent Indeed, by Theorems \ref{theorem structure of omega copy(1)} (i)
and \ref{theorem structure of omega} (i), $\mathit{\Omega }_{\mathfrak{m}%
}^{\sharp }$ is the weak$^{\ast }$--closed convex hull of the set of states
in $\mathit{M}(\mathfrak{T}_{\mathfrak{m}}^{\sharp })$ satisfying the
Euler--Lagrange equations (\ref{gap equations}).

\begin{remark}
If $\mathit{\Omega }_{\mathfrak{m}}^{\sharp }$ is not a face then there is,
at least, one ergodic state\footnote{%
Note that $\mathit{M}(\mathfrak{T}_{\mathfrak{m}}^{\sharp })\cap \mathcal{E}%
_{1}\neq \emptyset $ because $\mathit{M}(\mathfrak{T}_{\mathfrak{m}}^{\sharp
})$ is a union of non--empty closed faces by (\ref{union equilibrium states
min max theory}), see also Lemma \ref{remark equilibrium state approches}.} $%
\hat{\omega}\in \mathit{M}(\mathfrak{T}_{\mathfrak{m}}^{\sharp })\cap 
\mathcal{E}_{1}$ which does not satisfy the Euler--Lagrange equations (\ref%
{gap equations}).
\end{remark}

\begin{remark}[Max attractive theory $\mathfrak{T}_{\mathfrak{m}}^{-}$ and
max--min local theory $\mathfrak{T}_{\mathfrak{m}}^{\flat }$]
In the same way 
\index{Theory!local!max--min}%
\index{Theory!max attractive}we define the min repulsive theory $\mathfrak{T}%
_{\mathfrak{m}}^{+}$ (Definition \ref{definition min repulsive theory}) and
the min--max local theory $\mathfrak{T}_{\mathfrak{m}}^{\sharp }$
(Definition \ref{definition effective theories bogo}), one could define the
max attractive theory $\mathfrak{T}_{\mathfrak{m}}^{-}$ and the max--min
local theory $\mathfrak{T}_{\mathfrak{m}}^{\flat }$ for any $\mathfrak{m}\in 
\mathcal{M}_{1}$. In the same way we have Theorems \ref{theorem structure of
omega copy(1)}, \ref{theorem structure of omega} and \ref{lemma explosion l
homme capacabana1 copy(1)}, such theories $\mathfrak{T}_{\mathfrak{m}}^{-}$
and $\mathfrak{T}_{\mathfrak{m}}^{\flat }$ shall be related to the
(non--empty) set $\mathit{M}_{\mathfrak{m}}^{\flat }$ of minimizers of the
functional $f_{\mathfrak{m}}^{\flat }$ over $E_{1}$, see (\ref{convex
functional g_m}), (\ref{pressure bemol}) and Theorem \ref{theorem saddle
point} ($\flat $).
\end{remark}

\section{Long--range interactions and long--range order (LRO)\label{Section
ODLRO}%
\index{Long--range order|textbf}}

The solution $d_{a}\in L^{2}(\mathcal{A},\mathbb{C})$ defined by (\ref{gap
equations}) has a direct interpretation as the mean energy density of
long--range interactions $\Phi _{a}$ and $\Phi _{a}^{\prime }$. Moreover, it
is related to the so--called \emph{long--range order} (LRO) property. In
particular, models with non--zero $d_{a,-}\in \mathcal{C}_{\mathfrak{m}%
}^{\sharp }$ show an \emph{off diagonal long--range order} (ODLRO), a
property proposed by Yang \cite{ODLRO} to define super--conducting phases.
The latter can be seen as a consequence of the following theorem:

\begin{theorem}[Off diagonal long--range order]
\label{corollary ODLRO}\mbox{ }\newline
\index{Long--range order (LRO)!off diagonal (ODLRO)|textbf}For any $c_{a}\in
L^{2}(\mathcal{A},\mathbb{C})$, let $B_{c_{a}}:=\left\langle \mathfrak{e}%
_{\Phi _{a}}+i\mathfrak{e}_{\Phi _{a}^{\prime }},\gamma
_{a}c_{a}\right\rangle $. Then, for any $\omega \in \mathit{\Omega }_{%
\mathfrak{m}}^{\sharp }$,%
\begin{equation*}
\Delta _{B_{c_{a}}}\left( \omega \right) :=\lim\limits_{L\rightarrow \infty }%
\frac{1}{|\Lambda _{L}|^{2}}\sum\limits_{x,y\in \Lambda _{L}}\omega \left(
\alpha _{x}(B_{c_{a}}^{\ast })\alpha _{y}(B_{c_{a}})\right)
\end{equation*}%
satisfies the inequality 
\begin{equation*}
\Delta _{B_{c_{a}}}\left( \omega \right) \geq \underset{d_{a,-}\in \mathcal{C%
}_{\mathfrak{m}}^{\sharp }}{\min }\{\left\vert \left\langle d_{a,-}+\mathrm{r%
}_{+}(d_{a,-}),\gamma _{a}c_{a}\right\rangle \right\vert ^{2}\}.
\end{equation*}
\end{theorem}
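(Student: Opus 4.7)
The plan is to combine two layers of barycentric decomposition with Jensen's inequality. I first observe that $B_{c_{a}}$, defined as a Bochner-type integral of $\mathcal{U}$-valued functions against $c_{a}\in L^{2}(\mathcal{A},\mathbb{C})$, is a genuine element of $\mathcal{U}$ (bounded in norm by a multiple of $\Vert c_{a}\Vert _{2}$ via Cauchy--Schwarz together with (\ref{eq:enpersite bounded})). In particular, the quantity $\Delta _{B_{c_{a}}}(\omega )$ appearing in the statement is exactly the space-averaging functional of Definition \ref{definition de deltabis} evaluated at the translation invariant state $\omega \in \mathit{\Omega }_{\mathfrak{m}}^{\sharp }\subseteq E_{1}$.

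Next, for any $\omega \in \mathit{\Omega }_{\mathfrak{m}}^{\sharp }$, I would invoke Theorem \ref{theorem structure of omega copy(1)} (iii) to obtain a probability measure $\nu _{\omega }$ on $\mathit{\Omega }_{\mathfrak{m}}^{\sharp }$, supported on $\mathcal{E}(\mathit{\Omega }_{\mathfrak{m}}^{\sharp })$, whose barycenter is $\omega $. Applying this to $(B_{c_{a}})_{L}^{\ast }(B_{c_{a}})_{L}\in \mathcal{U}$ gives $\omega ((B_{c_{a}})_{L}^{\ast }(B_{c_{a}})_{L})=\int \hat{\omega}((B_{c_{a}})_{L}^{\ast }(B_{c_{a}})_{L})\mathrm{d}\nu _{\omega }(\hat{\omega})$. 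Since the integrand is uniformly bounded in $L$ by $\Vert B_{c_{a}}\Vert ^{2}$, Lebesgue's dominated convergence theorem (together with the definition of $\Delta _{B_{c_{a}}}$) yields
\begin{equation*}
\Delta _{B_{c_{a}}}(\omega )=\int_{\mathcal{E}(\mathit{\Omega }_{\mathfrak{m}}^{\sharp })}\Delta _{B_{c_{a}}}(\hat{\omega})\,\mathrm{d}\nu _{\omega }(\hat{\omega}).
\end{equation*}
(The same formula can alternatively be derived from the affinity and weak$^{\ast }$--upper semi-continuity of $\Delta _{B_{c_{a}}}$ on $E_{1}$, cf. Theorem \ref{Lemma1.vonN copy(4)} (i).)

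I then bound the integrand pointwise. Fix $\hat{\omega}\in \mathcal{E}(\mathit{\Omega }_{\mathfrak{m}}^{\sharp })$. By Theorem \ref{theorem structure of omega} (ii) there exists $d_{a,-}\in \mathcal{C}_{\mathfrak{m}}^{\sharp }$ with $\hat{\omega}\in \mathit{\Omega }_{\mathfrak{m}}^{\sharp }(d_{a,-}+\mathrm{r}_{+}(d_{a,-}))$. The defining relation (\ref{subset of a face}) then reads $e_{\Phi _{a}}(\hat{\omega})+ie_{\Phi _{a}^{\prime }}(\hat{\omega})=d_{a,-}+\mathrm{r}_{+}(d_{a,-})$ (a.e.), so that, after unfolding the $L^{2}$--inner product and using Lemma \ref{Th.en.func} (i), $\hat{\omega}(B_{c_{a}})=\langle d_{a,-}+\mathrm{r}_{+}(d_{a,-}),\gamma _{a}c_{a}\rangle $. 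As $\hat{\omega}\in E_{1}$, Theorem \ref{Lemma1.vonN} (iv) combined with Jensen's inequality (Remark \ref{Remark lower bound delta}) gives $\Delta _{B_{c_{a}}}(\hat{\omega})\geq |\hat{\omega}(B_{c_{a}})|^{2}=|\langle d_{a,-}+\mathrm{r}_{+}(d_{a,-}),\gamma _{a}c_{a}\rangle |^{2}$, which in turn is bounded below by the infimum over $\mathcal{C}_{\mathfrak{m}}^{\sharp }$. Inserting this pointwise bound under the integral above immediately yields the claimed inequality.

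The only residual obstacle is to justify writing $\min $ rather than $\inf $. For this, I would verify that $d_{a,-}\mapsto |\langle d_{a,-}+\mathrm{r}_{+}(d_{a,-}),\gamma _{a}c_{a}\rangle |^{2}$ is weakly continuous on $\mathcal{C}_{\mathfrak{m}}^{\sharp }$: the linear part is trivially weakly continuous, while the $\mathrm{r}_{+}$--contribution is continuous because $\mathrm{r}_{+}$ maps the weak topology on $L_{-}^{2}$ to the norm topology on $L_{+}^{2}$ (this is precisely the content recalled in the paragraph surrounding (\ref{thermodyn decision rule})). Since $\mathcal{C}_{\mathfrak{m}}^{\sharp }$ is weakly compact (Lemma \ref{lemma idiot interaction approx 2} $(\sharp )$, cited below (\ref{eq conserve strategy})), the infimum is attained, completing the argument.
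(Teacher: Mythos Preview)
Your proof is correct and follows essentially the same route as the paper: use Theorem~\ref{theorem structure of omega copy(1)}~(iii) to decompose $\omega$ over $\mathcal{E}(\mathit{\Omega}_{\mathfrak{m}}^{\sharp})$, pass $\Delta_{B_{c_a}}$ through the integral via its affinity and upper semi-continuity (the paper cites Lemma~\ref{Corollary 4.1.18.} for this, which is your parenthetical alternative), and bound each $\Delta_{B_{c_a}}(\hat\omega)$ below via Remark~\ref{Remark lower bound delta} and the gap equations from Theorem~\ref{theorem structure of omega}~(ii). Your additional justification that the infimum is attained (weak--norm continuity of $\mathrm{r}_+$ plus weak compactness of $\mathcal{C}_{\mathfrak{m}}^{\sharp}$) is a nice detail the paper leaves implicit.
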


\begin{proof}
By Definition \ref{definition de deltabis}, Remark \ref{Remark lower bound
delta}, and Theorem \ref{theorem structure of omega} (ii), for any extreme
state $\hat{\omega}\in \mathcal{E}(\mathit{\Omega }_{\mathfrak{m}}^{\sharp
}) $, there is $d_{a,-}\in \mathcal{C}_{\mathfrak{m}}^{\sharp }$ such that 
\begin{equation*}
\Delta _{B_{c_{a}}}\left( \hat{\omega}\right) \geq \left\vert \hat{\omega}%
\left( B_{c_{a}}\right) \right\vert ^{2}=\left\vert \left\langle
d_{a},\gamma _{a}c_{a}\right\rangle \right\vert ^{2}
\end{equation*}%
with $d_{a}:=d_{a,-}+\mathrm{r}_{+}(d_{a,-})$. Then via Theorem \ref{theorem
structure of omega copy(1)} (iii) combined with Lemma \ref{Corollary 4.1.18.}
one gets the assertion. 
\end{proof}%

\begin{remark}
By using similar arguments as above, if all extreme generalized t.i.
equilibrium states $\hat{\omega}\in \mathcal{E}(\mathit{\Omega }_{\mathfrak{m%
}}^{\sharp })$ are strongly mixing (see (\ref{mixing})) then%
\begin{equation*}
\Delta _{B_{c_{a}}}\left( \omega \right) =\lim\limits_{|y-x|\rightarrow
\infty }\omega \left( \alpha _{x}(B_{c_{a}}^{\ast })\alpha
_{y}(B_{c_{a}})\right) \geq \underset{d_{a,-}\in \mathcal{C}_{\mathfrak{m}%
}^{\sharp }}{\min }\{\left\vert \left\langle d_{a,-}+\mathrm{r}%
_{+}(d_{a,-}),\gamma _{a}c_{a}\right\rangle \right\vert ^{2}\}
\end{equation*}%
for all $\omega \in \mathit{\Omega }_{\mathfrak{m}}^{\sharp }$.
\end{remark}

Theorem \ref{corollary ODLRO} implies ODLRO in the following sense. Take any 
\index{Gauge invariant!models}gauge invariant model $\mathfrak{m}\in 
\mathcal{M}_{1}$ -- which means that $U_{l}\in \mathcal{U}^{\circ }$ (cf. (%
\ref{definition of gauge invariant operators})) -- such that its set $%
\mathit{\Omega }_{\mathfrak{m}}^{\sharp }$ of generalized t.i. equilibrium
states contains at least one state from $\mathcal{E}(E_{1}^{\circ })$, i.e., 
$\mathit{\Omega }_{\mathfrak{m}}^{\sharp }\cap \mathcal{E}(E_{1}^{\circ
})\not=\emptyset $. This is the case, for instance, if the long--range
interactions of $\mathfrak{m}$ are purely attractive%
\index{Long--range models!attractions} (i.e., $\Phi _{a,+}=\Phi
_{a,+}^{\prime }=0$ (a.e.)) as, in this situation, $\mathit{\Omega }_{%
\mathfrak{m}}^{\sharp ,\circ }:=\mathit{\Omega }_{\mathfrak{m}}^{\sharp
}\cap E_{1}^{\circ }$ is a face in $E_{1}^{\circ }$, see also Remark \ref%
{remark.eq.inv.gauge}. Suppose that $c_{a}$ is chosen such that\footnote{%
Both assumption can easily be verified in various long--range gauge
invariant models, see, e.g., \cite{BruPedra1}.}%
\begin{equation*}
\sigma ^{\circ }(B_{c_{a}})=0\quad \mathrm{and}\quad \underset{d_{a,-}\in 
\mathcal{C}_{\mathfrak{m}}^{\sharp }}{\min }\{\left\vert \left\langle
d_{a,-}+\mathrm{r}_{+}(d_{a,-}),\gamma _{a}c_{a}\right\rangle \right\vert
^{2}\}>0,
\end{equation*}%
see Remark \ref{proj.gauge.inv}. Choose now any 
\index{Gauge invariant!equilibrium states}gauge invariant t.i. equilibrium
state $%
\hat{\omega}\in \mathit{\Omega }_{\mathfrak{m}}^{\sharp ,\circ }$, which is
extreme in the set $E_{1}^{\circ }$ of t.i. and gauge invariant states (cf.
Remarks \ref{t.i. + gauge inv states}, \ref{t.i. + gauge inv states
ergodicity}, and \ref{remark.eq.inv.gauge}). Then, by the assumptions, for
all $A^{\circ }\in \mathcal{U}^{\circ }$ such that $\hat{\omega}(A^{\circ
})=0$, 
\begin{equation*}
\lim\limits_{L\rightarrow \infty }\frac{1}{|\Lambda _{L}|^{2}}%
\sum\limits_{x,y\in \Lambda _{L}}\hat{\omega}(\alpha _{x}(A^{\circ })^{\ast
}\alpha _{y}(A^{\circ }))=\left\vert \hat{\omega}\left( A^{\circ }\right)
\right\vert ^{2}=0.
\end{equation*}%
However, 
\begin{equation*}
\lim\limits_{L\rightarrow \infty }\frac{1}{|\Lambda _{L}|^{2}}%
\sum\limits_{x,y\in \Lambda _{L}}\hat{\omega}(\alpha _{x}(B_{c_{a}})^{\ast
}\alpha _{y}(B_{c_{a}}))>0
\end{equation*}%
in spite of the fact that $\hat{\omega}(B_{c_{a}})=\hat{\omega}\circ \sigma
^{\circ }(B_{c_{a}})=0$. Indeed, any quadratic element $A^{\circ
}=A_{1}A_{2}\in \mathcal{U}^{\circ }$ with $A_{1},A_{2}\in \mathcal{U}%
^{\circ }$ is called \textquotedblleft diagonal\textquotedblright ,\ whereas
elements of the form $A^{\circ }=A_{1}A_{2}\in \mathcal{U}^{\circ }$ with $%
A_{1},A_{2}\in \mathcal{U}\backslash \mathcal{U}^{\circ }$ -- as, for
instance, the elements $\alpha _{x}(B_{c_{a}})^{\ast }\alpha _{y}(B_{c_{a}})$
considered above -- are called \textquotedblleft
off--diagonal\textquotedblright\ w.r.t. the algebra $\mathcal{U}^{\circ }$,
see, e.g., \cite[Section 5.2]{Sewell}.%
\index{Long--range order (LRO)!off diagonal (ODLRO)}

In the general case, the order parameter $d_{a,-}$ is, a priori, not unique
since the non--empty set $\mathcal{C}_{\mathfrak{m}}^{\sharp }$ (\ref{eq
conserve strategy}) of conservative strategies of the attractive player is
only weakly compact, see Lemma \ref{lemma idiot interaction approx 2} ($%
\sharp $). Non--uniqueness of solutions of the min--max variational problem $%
\mathrm{F}_{\mathfrak{m}}^{\sharp }$ of the thermodynamic game defined in
Definition \ref{definition two--person zero--sum game} ensures the existence
of a non--zero $d_{a,-}\in \mathcal{C}_{\mathfrak{m}}^{\sharp }$ which
should be related to ODLRO as explained above. In particular, ODLRO w.r.t.
elements of the form $B_{c_{a}}$ as defined above is usually related to
long--range attractions $\Phi _{a,-},\Phi _{a,-}^{\prime }$ (Definition \ref%
{long range attraction-repulsion}). As an example, we recommend to have a
look on the strong coupling BCS--Hubbard model analyzed in \cite{BruPedra1}.

By contrast, the solution $\mathrm{r}_{+}(d_{a,-})\in \mathcal{C}_{\mathfrak{%
m}}^{\sharp }\left( d_{a,-}\right) $ of the variational problem $\mathfrak{f}%
_{\mathfrak{m}}^{\sharp }\left( d_{a,-}\right) $ defined in Definition \ref%
{definition two--person zero--sum game} is always unique, see Lemma \ref%
{lemma idiot interaction approx 2 copy(2)} ($\sharp $). In particular, if
the model $\mathfrak{m}\in \mathcal{M}_{1}$ has purely repulsive long--range
interactions%
\index{Long--range models!repulsions}, i.e., $\Phi _{a,-}=\Phi
_{a,-}^{\prime }=0$ (a.e.), then no first order phase transition (related to
observables of the form $B_{c_{a}}$) can appear. If, additionally, $%
\mathfrak{m}$ is also gauge invariant -- which means that $U_{l}\in \mathcal{%
U}^{\circ }$ -- then%
\begin{equation*}
d_{a,+}=\omega (\mathfrak{e}_{\Phi _{a}}+i\mathfrak{e}_{\Phi _{a}^{\prime
}})=\omega \left( \sigma ^{\circ }(\mathfrak{e}_{\Phi _{a}}+i\mathfrak{e}%
_{\Phi _{a}^{\prime }})\right)
\end{equation*}%
for all $\omega \in \mathit{\Omega }_{\mathfrak{m}}^{\sharp }$, see again (%
\ref{definition of gauge invariant operators}) and Remark \ref%
{proj.gauge.inv} for definitions of the set $\mathcal{U}^{\circ }$ of all
gauge invariant elements and the gauge invariant projection $\sigma ^{\circ
} $ respectively. In particular, for all $a\in \mathcal{A}$ such that $%
\sigma ^{\circ }(\mathfrak{e}_{\Phi _{a}}+i\mathfrak{e}_{\Phi _{a}^{\prime
}})=0$, the unique $d_{a,+}$ must be zero.

However, the existence of a non--zero order parameter $d_{a,-}$ is, a
priori, \emph{not necessary} to get LRO:

\begin{theorem}[Long--Range Order]
\label{thm LRO}%
\index{Long--range order (LRO)}\mbox{ }\newline
Let $\mathfrak{m}\in \mathcal{M}_{1}$ such that $\Phi _{a,-}=\Phi
_{a,-}^{\prime }=0$ (a.e.). For any $c_{a}\in L^{2}(\mathcal{A},\mathbb{C})$%
, let $B_{c_{a}}:=\left\langle \mathfrak{e}_{\Phi _{a}}+i\mathfrak{e}_{\Phi
_{a}^{\prime }},\gamma _{a}c_{a}\right\rangle $. \newline
\emph{(i)} Assume that $\mathit{\Omega }_{\mathfrak{m}}$ is a face in $E_{1}$%
. If $\sigma ^{\circ }(B_{c_{a}})=0$ and $\mathfrak{m}$ is a gauge invariant
model, i.e., $U_{l}\in \mathcal{U}^{\circ }$ for all $l\in \mathbb{N}$, then 
$\Delta _{B_{c_{a}}}\left( \omega \right) =0$ for all $\omega \in \mathit{%
\Omega }_{\mathfrak{m}}^{\sharp ,\circ }:=\mathit{\Omega }_{\mathfrak{m}%
}^{\sharp }\cap E_{1}^{\circ }$. \newline
\emph{(ii)} Assume that $\mathit{\Omega }_{\mathfrak{m}}$ is not a face in $%
E_{1}$. Then, there is $c_{a}\in L^{2}(\mathcal{A},\mathbb{C})$ and $\omega
_{0}\in \mathcal{E}(\mathit{\Omega }_{\mathfrak{m}}^{\sharp })$ such that $%
\omega _{0}\notin \mathcal{E}_{1}$ and $\Delta _{B_{c_{a}}}\left( \omega
_{0}\right) >0$.
\end{theorem}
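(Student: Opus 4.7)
My plan is to exploit the purely repulsive hypothesis via Theorem~\ref{theorem purement repulsif sympa}~($+$) together with Remark~\ref{theorem structure of omega-remark}, which give $\mathit{\Omega}_{\mathfrak{m}}^{\sharp}=\mathit{\hat{M}}_{\mathfrak{m}}=\mathit{\Omega}_{\mathfrak{m}}^{\sharp}(d_{a,+})$ for a unique $d_{a,+}\in\mathcal{C}_{\mathfrak{m}}^{\flat}$, so every $\omega\in\mathit{\Omega}_{\mathfrak{m}}^{\sharp}$ lies in the closed face $\mathit{M}_{\Phi(d_{a,+})}$ of $E_1$ (Lemma~\ref{remark equilibrium state approches}) and satisfies the gap equation $e_{\Phi_a}(\omega)+ie_{\Phi_a'}(\omega)=d_{a,+}$ a.e. For part~(i), I will first observe that $\mathit{\Omega}_{\mathfrak{m}}^{\sharp,\circ}=\mathit{\Omega}_{\mathfrak{m}}^{\sharp}\cap E_1^{\circ}$ is a face of $E_1^{\circ}$: any decomposition in $E_1^{\circ}\subseteq E_1$ of an element of $\mathit{\Omega}_{\mathfrak{m}}^{\sharp,\circ}$ has its summands in $\mathit{\Omega}_{\mathfrak{m}}^{\sharp}$ (by the $E_1$-face hypothesis) and in $E_1^{\circ}$ (by construction). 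Since $E_1^{\circ}$ is affinely homeomorphic to the Poulsen simplex (Remark~\ref{t.i. + gauge inv states}), Choquet's theorem yields a probability measure $\nu_{\omega}$ on $\mathcal{E}(\mathit{\Omega}_{\mathfrak{m}}^{\sharp,\circ})$ representing $\omega$. Combining Remark~\ref{remark.eq.inv.gauge} with the $E_1$-face hypothesis gives the crucial identification $\mathcal{E}(\mathit{\Omega}_{\mathfrak{m}}^{\sharp,\circ})=\mathit{\Omega}_{\mathfrak{m}}^{\sharp}\cap\mathcal{E}_1\cap E_1^{\circ}$, so each $\hat{\rho}'$ in the support of $\nu_{\omega}$ is \emph{both} $E_1$-ergodic and gauge invariant. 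Affinity of $\Delta_{B_{c_a}}$ (Theorem~\ref{Lemma1.vonN copy(4)}~(i)) combined with Lemma~\ref{Corollary 4.1.18.} then reduces~(i) to showing $\Delta_{B_{c_a}}(\hat{\rho}')=0$, which follows from ergodicity ($\Delta_{B_{c_a}}(\hat{\rho}')=|\hat{\rho}'(B_{c_a})|^2$ by Theorem~\ref{theorem ergodic extremal}) together with gauge invariance and $\sigma^{\circ}(B_{c_a})=0$, giving $\hat{\rho}'(B_{c_a})=\hat{\rho}'(\sigma^{\circ}(B_{c_a}))=0$.

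For part~(ii), I would begin by arguing that non-faceness of $\mathit{\Omega}_{\mathfrak{m}}^{\sharp}$ forces the existence of some $\omega_0\in\mathcal{E}(\mathit{\Omega}_{\mathfrak{m}}^{\sharp})\setminus\mathcal{E}_1$: if instead $\mathcal{E}(\mathit{\Omega}_{\mathfrak{m}}^{\sharp})\subseteq\mathcal{E}_1$, then Theorem~\ref{theorem structure of omega copy(1)}~(iii) together with the uniqueness part of Theorem~\ref{theorem choquet} would imply that the ergodic decomposition of every $\omega\in\mathit{\Omega}_{\mathfrak{m}}^{\sharp}$ is supported in $\mathit{\Omega}_{\mathfrak{m}}^{\sharp}$; splitting an arbitrary $E_1$-decomposition $\omega=\lambda\rho_1+(1-\lambda)\rho_2$ at the ergodic level would then force $\rho_1,\rho_2\in\mathit{\Omega}_{\mathfrak{m}}^{\sharp}$, contradicting the assumption. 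Such an $\omega_0$ has a non-trivial ergodic decomposition $\mu_{\omega_0}$, supported in $\mathit{M}_{\Phi(d_{a,+})}\cap\mathcal{E}_1$ since $\mathit{M}_{\Phi(d_{a,+})}$ is a face of $E_1$. If the gap equation held $\mu_{\omega_0}$-a.e., $\mu_{\omega_0}$ would be concentrated on $\mathit{\Omega}_{\mathfrak{m}}^{\sharp}\cap\mathcal{E}_1$, making $\omega_0$ a non-trivial convex combination of ergodic elements of $\mathit{\Omega}_{\mathfrak{m}}^{\sharp}$ and contradicting extremality of $\omega_0$ in $\mathit{\Omega}_{\mathfrak{m}}^{\sharp}$. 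Therefore $\hat{\rho}\mapsto e_{\Phi_a}(\hat{\rho})+ie_{\Phi_a'}(\hat{\rho})$ fails to be $\mu_{\omega_0}$-a.e.\ equal to $d_{a,+}$, and I can select $c_a\in L^2(\mathcal{A},\mathbb{C})$ pairing non-trivially with the difference of two distinct $\mu_{\omega_0}$-representative values of $e_{\Phi_a}+ie_{\Phi_a'}$, so that $\hat{\rho}\mapsto\hat{\rho}(B_{c_a})=\langle e_{\Phi_a}(\hat{\rho})+ie_{\Phi_a'}(\hat{\rho}),\gamma_a c_a\rangle$ is not $\mu_{\omega_0}$-a.e.\ constant. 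Theorem~\ref{Lemma1.vonN}~(iv) together with the strict form of Jensen's inequality then yield
\[
\Delta_{B_{c_a}}(\omega_0)=\int|\hat{\rho}(B_{c_a})|^2\,d\mu_{\omega_0}(\hat{\rho})>|\omega_0(B_{c_a})|^2\geq 0.
\]

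The most delicate step in~(i) will be the identification $\mathcal{E}(\mathit{\Omega}_{\mathfrak{m}}^{\sharp,\circ})\subseteq\mathcal{E}_1$, which relies on pairing Remark~\ref{remark.eq.inv.gauge} with the $E_1$-face hypothesis in exactly the right way; the subsequent integration against the $E_1^{\circ}$-Choquet decomposition is then routine. In~(ii), the main subtleties lie in extracting $\omega_0\in\mathcal{E}(\mathit{\Omega}_{\mathfrak{m}}^{\sharp})\setminus\mathcal{E}_1$ from the failure of the face condition via the uniqueness of ergodic decompositions, and in explicitly constructing a separating $c_a$ once the gap equation is shown to fail on a positive-$\mu_{\omega_0}$-measure set of ergodic components.
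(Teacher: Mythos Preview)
Your proposal is correct, and both parts work. For (i), your route is genuinely different from the paper's. The paper stays in $E_1$ throughout: it invokes Theorem~\ref{Lemma1.vonN}~(iv) to write $\Delta_{B_{c_a}}(\omega)=\int|\hat{\rho}(B_{c_a})|^2\,d\mu_\omega$ over the $E_1$-ergodic decomposition; since the face hypothesis places $\mu_\omega$ inside $\mathit{\Omega}_{\mathfrak{m}}^{\sharp}\cap\mathcal{E}_1\subseteq\mathit{\Omega}_{\mathfrak{m}}^{\sharp}(d_{a,+})$, the gap equation with the \emph{unique} $d_{a,+}$ makes $\hat{\rho}(B_{c_a})\equiv\langle d_{a,+},\gamma_a c_a\rangle=\omega(B_{c_a})$ on the support of $\mu_\omega$, so $\Delta_{B_{c_a}}(\omega)=|\omega(B_{c_a})|^2$, and gauge invariance of $\omega$ alone then gives $\omega(B_{c_a})=\omega(\sigma^{\circ}(B_{c_a}))=0$. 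You instead decompose in $E_1^{\circ}$ and use Remark~\ref{remark.eq.inv.gauge} combined with the face hypothesis to identify the extreme points of $\mathit{\Omega}_{\mathfrak{m}}^{\sharp,\circ}$ as simultaneously $E_1$-ergodic and gauge invariant, so $\hat{\rho}'(B_{c_a})=0$ directly, with no appeal to the gap equation --- slightly longer, but more robust in that it never uses uniqueness of $d_{a,+}$ and would apply to any gauge-invariant $\mathfrak{m}$ with $\mathit{\Omega}_{\mathfrak{m}}^{\sharp}$ a face. For (ii) the two proofs run in parallel (same extraction of a non-ergodic $\omega_0\in\mathcal{E}(\mathit{\Omega}_{\mathfrak{m}}^{\sharp})$, same failure of the gap equation on a positive-$\mu_{\omega_0}$-measure set of ergodic components inside $\mathit{M}_{\Phi(d_{a,+})}$); the paper finishes by implicitly splitting into the cases $d_{a,+}\neq 0$ (where $\Delta_{B_{c_a}}(\omega_0)\geq|\omega_0(B_{c_a})|^2>0$ for suitable $c_a$) and $d_{a,+}=0$ (where $\Delta_{B_{c_a}}(\omega_0)=0$ for all $c_a$ is shown to force $\mu_{\omega_0}$ into $\mathit{\Omega}_{\mathfrak{m}}^{\sharp}$, contradicting extremality), whereas your strict-Jensen step handles both cases uniformly and even delivers the sharper conclusion $\Delta_{B_{c_a}}(\omega_0)>|\omega_0(B_{c_a})|^2$.
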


\begin{proof}
Fix all parameters of the theorem. Assume that $\mathit{\Omega }_{\mathfrak{m%
}}$ is a face in $E_{1}$, i.e., $\mathcal{E}(\mathit{\Omega }_{\mathfrak{m}%
}^{\sharp })=\mathit{\Omega }_{\mathfrak{m}}^{\sharp }\cap \mathcal{E}_{1}$.
Then, by the uniqueness of solution $d_{a,+}$ of the variational problem $%
\mathfrak{f}_{\mathfrak{m}}^{\sharp }\left( c_{a,-}\right) $ (Definition \ref%
{definition two--person zero--sum game}) combined with Theorem \ref%
{Lemma1.vonN} (iv), and Theorem \ref{theorem structure of omega} (ii), we
obtain that%
\begin{equation}
\Delta _{B_{c_{a}}}\left( \omega \right) :=\lim\limits_{L\rightarrow \infty }%
\frac{1}{|\Lambda _{L}|^{2}}\sum\limits_{x,y\in \Lambda _{L}}\omega (\alpha
_{x}(B_{c_{a}})^{\ast }\alpha _{y}(B_{c_{a}}))=|\omega (B_{c_{a}})|^{2}
\label{ODLRO false}
\end{equation}%
for all $\omega \in \mathit{\Omega }_{\mathfrak{m}}^{\sharp }$ and with $%
B_{c_{a}}\in \mathcal{U}$ defined as above. In particular, the condition $%
\sigma ^{\circ }(B_{c_{a}})=0$ implies for $\omega \in \mathit{\Omega }_{%
\mathfrak{m}}^{\sharp ,\circ }$ (cf. Remark \ref{remark.eq.inv.gauge}) that $%
\omega (B_{c_{a}})=\omega \circ \sigma ^{\circ }(B_{c_{a}})=0$ and thus, $%
\Delta _{B_{c_{a}}}\left( \omega \right) =0$. Note that if $\mathfrak{m}$ is
gauge invariant then $\mathit{\Omega }_{\mathfrak{m}}^{\sharp ,\circ }$ is
non-empty.

Assume now that $\mathit{\Omega }_{\mathfrak{m}}$ is not a face in $E_{1}$.
Then there is an extreme generalized t.i. equilibrium states $\hat{\omega}%
_{0}\in \mathcal{E}(\mathit{\Omega }_{\mathfrak{m}}^{\sharp })$ which is not
ergodic. Since $\mathit{M}(\mathfrak{T}_{\mathfrak{m}}^{\sharp })$ is a face
of $E_{1}$ (cf. Lemma \ref{remark equilibrium state approches} and (\ref%
{union equilibrium states min max theory})), by Theorem \ref{Lemma1.vonN}
(iv) and Corollary \ref{lemma explosion l homme capacabana1 copy(2)}, $\hat{%
\omega}_{0}\in \mathit{M}(\mathfrak{T}_{\mathfrak{m}}^{\sharp })$ and%
\begin{equation}
\Delta _{B_{c_{a}}}\left( \hat{\omega}_{0}\right) =\int_{\mathit{M}(%
\mathfrak{T}_{\mathfrak{m}}^{\sharp })\cap \mathcal{E}_{1}}\mathrm{d}\mu _{%
\hat{\omega}_{0}}(\hat{\rho})\;|\langle \gamma _{a}c_{a},e_{\Phi _{a}}(\hat{%
\rho})+ie_{\Phi _{a}^{\prime }}(\hat{\rho})\rangle |^{2}.
\label{equality sup non face}
\end{equation}%
In particular, there is $c_{a}\in L^{2}(\mathcal{A},\mathbb{C})$ and a
non--ergodic state $\hat{\omega}_{0}\in \mathcal{E}(\mathit{\Omega }_{%
\mathfrak{m}}^{\sharp })$ such that 
\begin{equation*}
\Delta _{B_{c_{a}}}\left( \hat{\omega}_{0}\right) >0.
\end{equation*}%
The latter holds even if $d_{a,+}=0$, which implies that%
\begin{equation*}
|\hat{\omega}(B_{c_{a}})|^{2}=\left\vert \left\langle d_{a,+},\gamma
_{a}c_{a}\right\rangle \right\vert ^{2}=0
\end{equation*}%
for all $\hat{\omega}\in \mathcal{E}(\mathit{\Omega }_{\mathfrak{m}}^{\sharp
})$ and $c_{a}\in L^{2}(\mathcal{A},\mathbb{C})$, see (\ref{gap equations})
and Lemma \ref{lemma idiot interaction approx 2 copy(2)} ($\sharp $).
Indeed, assume that $d_{a,+}=0$ and $\Delta _{B_{c_{a}}}\left( \hat{\omega}%
_{0}\right) =0$ for all $c_{a}\in L^{2}(\mathcal{A},\mathbb{C})$. By (\ref%
{equality sup non face}), this would imply for $\hat{\rho}$ $\mu _{\hat{%
\omega}_{0}}$--a.e. that 
\begin{equation*}
e_{\Phi _{a}}(\hat{\rho})+ie_{\Phi _{a}^{\prime }}(\hat{\rho})=0\mathrm{\
(a.e.),}
\end{equation*}%
i.e., $\hat{\rho}\in \mathit{M}(\mathfrak{T}_{\mathfrak{m}}^{\sharp })\cap 
\mathcal{E}_{1}$ solves the Euler--Lagrange equations (\ref{gap equations})
and thus $\hat{\rho}\in \mathit{\Omega }_{\mathfrak{m}}^{\sharp }$. Since
the measure $\mu _{\hat{\omega}_{0}}$ is not concentrated on $\hat{\omega}%
_{0}\notin \mathcal{E}_{1}$, this would imply that $\hat{\omega}_{0}$ is
decomposable within $\mathit{\Omega }_{\mathfrak{m}}^{\sharp }$
contradicting the fact that $\hat{\omega}_{0}\in \mathcal{E}(\mathit{\Omega }%
_{\mathfrak{m}}^{\sharp })$.%
\end{proof}%

Theorem \ref{thm LRO} (i) means that no ODLRO w.r.t. elements of the form $%
B_{c_{a}}$ can be observed under the assumptions of (i). Note meanwhile that
there are uncountably many $\mathfrak{m}\in \mathcal{M}$ for which $\mathit{%
\Omega }_{\mathfrak{m}}^{\sharp }$ is not a face of $E_{1}$, see Lemma \ref%
{lemma explosion l du mec copacabana2} in\ Section \ref{section breaking
theroy}. For instance, the existence of a model $\mathfrak{m}$ such that $%
d_{a,+}=0$ and $\mathit{\Omega }_{\mathfrak{m}}^{\sharp }$ is not a face in $%
E_{1}$ follows easily from the construction done in Lemmata \ref{corolaire
Bishop phelps1 copy(1)} and \ref{lemma explosion l du mec copacabana2}.
Theorem \ref{thm LRO} (ii) shows the existence of LRO in that situation.

In conclusion, both long--range interactions $\Phi _{a,-},\Phi
_{a,-}^{\prime }$ and $\Phi _{a,+},\Phi _{a,+}^{\prime }$ (Definition \ref%
{long range attraction-repulsion}) can produce a LRO, usually at high enough
inverse temperatures $\beta >0$. Nevertheless, long--range attractions $\Phi
_{a,-},\Phi _{a,-}^{\prime }$ and repulsions $\Phi _{a,+},\Phi
_{a,+}^{\prime }$ act in a completely different way. 
\index{Long--range models!repulsions}%
\index{Long--range models!attractions}Long--range attractions $\Phi
_{a,-},\Phi _{a,-}^{\prime }$ imply ODLRO by producing non--uniqueness of
conservative strategies of the attractive player (i.e. $|\mathcal{C}_{%
\mathfrak{m}}^{\sharp }|>1$), whereas long--range repulsions $\Phi
_{a,+},\Phi _{a,+}^{\prime }$ produce LRO by breaking the face structure of
the set $\mathit{\Omega }_{\mathfrak{m}}^{\sharp }$.

\section{Concluding Remarks\label{Concluding remark}}

\setcounter{equation}{0}%
In this section, we explain our achievements in the light of previous
results. We review -- on a formal level -- in Section \ref{Section bog
approx} the original idea of the Bogoliubov approximation, which was so
successfully used in theoretical physics. 
Section \ref{Section historical overview} compares our results with the
approximating Hamiltonian method defined by Bogoliubov Jr., Brankov,
Kurbatov, Tonchev, and Zagrebnov. In order to be as short as possible we
reduce the technical aspects to an absolute minimum in all this section,
hoping that it is still understandable.

\subsection{The Bogoliubov approximation\label{Section bog approx}%
\index{Bogoliubov approximation|textbf}}

Roughly speaking, the Bogoliubov approximation consists in replacing
specific operators appearing in the Hamiltonian of a given physical system
by constants which are determined as solutions of some self-consistency
equation or some associated variational problem. One important issue is the
way such substitutions should be performed. To be successful, it depends
much on the system under consideration. In order to highlight this aspect,
we discuss bellow three different situations were Bogoliubov's method is
usually applied.

Within his celebrated microscopic theory of superfluidity \cite{Bogoliubov1}
of Helium 4, Bogoliubov proposed in 1947 his famous \textquotedblleft
trick\textquotedblright , the so--called Bogoliubov approximation, by
observing the following:

\begin{itemize}
\item[(i)] For the considered Hamiltonian modelling a Bose gas in weak
interaction inside a finite box $\Lambda $, the annihilation and creation
operators\footnote{%
In Bogoliubov's theory, $b_{0}$ and $b_{0}^{\ast }$ are the
annihilation/creation operators w.r.t. the constant function $|\Lambda
|^{-1/2}$ acting on the boson Fock space.} $b_{0}$ and $b_{0}^{\ast }$ of
bosons only appear in the form $b_{0}|\Lambda |^{-1/2}$ and $b_{0}^{\ast
}|\Lambda |^{-1/2}$.

\item[(ii)] Because of the Canonical Commutation Relations (CCR)%
\index{CCR}, $b_{0}|\Lambda |^{-1/2}$ and $b_{0}^{\ast }|\Lambda |^{-1/2}$
almost commute at large volume $|\Lambda |$.

\item[(iii)] The operators $b_{0}$ and $b_{0}^{\ast }$ are \emph{unbounded}.
\end{itemize}

Based on (i)--(iii) Bogoliubov suggested that $b_{0}$ (resp. $b_{0}^{\ast }$%
) can be replaced by a complex number $c_{\Lambda }=\mathcal{O}\left(
|\Lambda |^{1/2}\right) $ (resp. $%
\bar{c}_{\Lambda }$) to be determined \emph{self--consistently}. For a
detailed description of the Bogoliubov theory of superfluidity, we recommend
the review \cite{BruZagrebnov8}.

The Bogoliubov approximation in this precise situation was rigorously
justified in 1968 by Ginibre \cite{Ginibre} on the level of the
grand--canonical pressure in the thermodynamic limit. See also \cite%
{Fannne-Pule-Verbeure1,Fannne-Pule-Verbeure2,LiebSeiringerYngvason3,Suto}.
Actually, the (infinite--volume) pressure is given through a supremum over
complex numbers and the constant $\mathrm{c}:=c_{\Lambda }|\Lambda |^{-1/2}$
in the substitution must be a solution of this variational problem. Up to
additional technical arguments this proof \cite%
{Ginibre,LiebSeiringerYngvason3} is based on Laplace's method together with
the completeness of the family of coherent vectors $\{|c\rangle \}_{c\in 
\mathbb{C}}$ whose elements satisfy $b_{0}|c\rangle =c|c\rangle $. In fact,
in which concerns the (infinite--volume) pressure, the Bogoliubov
approximation is exact for the (stable) Bose gas even if the number $%
n_{\Lambda }$ of boson operators $\{b_{j}\}_{j=1}^{n_{\Lambda }}$ replaced
by a constant is large, provided that $n_{\Lambda }=o(|\Lambda |)$, see \cite%
{LiebSeiringerYngvason3}. Observe that the validity of the Bogoliubov
approximation on the level of the pressure has nothing to do with the
existence, or not, of a Bose condensation. However, this approximation
becomes useful when the expectation value of either $b_{0}$ (resp. $%
b_{0}^{\ast }$) or $b_{0}^{\ast }b_{0}$ becomes macroscopic, i.e., in the
case of a Bose condensation.

\begin{remark}
In the case considered above, the validity of the replacement of operators
by (possibly non zero) complex numbers depends on the unboundedness of boson
operators, whose corresponding expectation value can possibly become
macroscopic (which means $\mathrm{c}\neq 0$). Observe that the same kind of
argument cannot work for Fermi systems since the corresponding annihilation
and creation operators $a_{j}$ and $a_{j}^{+}$ are bounded in norm.
\end{remark}

Another kind of Bogoliubov approximation can be applied on a large class of
(superstable) Bose gases having the long--range interaction $\lambda
N_{\Lambda }^{2}/|\Lambda |$ with $\lambda >0$, see \cite{bru1,bru2}. Here, $%
N_{\Lambda }$ is the particle number operator inside a finite box $\Lambda $
acting on the boson Fock space. Its expectation value per unit volume is
always a finite number, i.e., the particle density, since it is a \emph{%
space--average}. This observation is not depending on the fact that $%
N_{\Lambda }$ is unbounded. It is therefore natural to replace, in the
long--range interaction $\lambda N_{\Lambda }^{2}/|\Lambda |$, the term $%
N_{\Lambda }/|\Lambda |$ by a positive real number $\rho >0$ in order to get
an effective approximating model in the thermodynamic limit. This
approximation is proven in \cite{bru1} to be exact on the level of the
pressure provided that it is done in an appropriated manner. Indeed, the
(infinite--volume) pressure, in this case, is the infimum over strictly
positive real parameters $\rho $ of pressures of approximating models, use $%
\rho =(\mu -\alpha )/2\lambda $ in \cite[Eq. (3.4)]{bru1}. Observe that the
constants replacing operators in the corresponding Bogoliubov approximations
must be a solution of that variational problem, see \cite[Theorem 4.1]{bru1}%
. 
However, the approximating model leading to this variational problem is
derived by replacing $\lambda N_{\Lambda }^{2}/|\Lambda |$ with $\lambda
(2\rho N_{\Lambda }-|\Lambda |\rho ^{2})$, i.e., one term $\lambda \rho
N_{\Lambda }$ for each choice of $N_{\Lambda }$ in $\lambda N_{\Lambda
}^{2}/|\Lambda |$. See again \cite[Eq. (3.4)]{bru1} with the choice $\rho
=(\mu -\alpha )/2\lambda >0$ because of \cite[Theorem 4.1]{bru1}. This kind
of Bogoliubov approximation could also be called a \emph{Bogoliubov
linearization}.%
\index{Bogoliubov linearization}

A similar observation holds of course for our class of Fermi models, see
Definition \ref{definition BCS-type model approximated} and Theorem \ref%
{theorem saddle point} ($\sharp $). Indeed, our long--range interaction
(Definition \ref{definition BCS-type model}) is a sum of products 
\begin{equation*}
(U_{\Lambda _{l}}^{\Phi _{a}}+iU_{\Lambda _{l}}^{\Phi _{a}^{\prime }})^{\ast
}(U_{\Lambda _{l}}^{\Phi _{a}}+iU_{\Lambda _{l}}^{\Phi _{a}^{\prime }}),
\end{equation*}%
where the expectation value of $(U_{\Lambda _{l}}^{\Phi _{a}}+iU_{\Lambda
_{l}}^{\Phi _{a}^{\prime }})$ per unit volume is always a finite number (a
mean energy density) as it is also a \emph{space--average}. Similar to \cite%
{bru1,bru2} for the real case, from our results the following replacement
has to be done: 
\begin{eqnarray*}
&&%
\frac{1}{|\Lambda _{l}|}(U_{\Lambda _{l}}^{\Phi _{a}}+iU_{\Lambda
_{l}}^{\Phi _{a}^{\prime }})^{\ast }(U_{\Lambda _{l}}^{\Phi
_{a}}+iU_{\Lambda _{l}}^{\Phi _{a}^{\prime }}) \\
&\longrightarrow &\quad \bar{c}_{a}(U_{\Lambda _{l}}^{\Phi _{a}}+iU_{\Lambda
_{l}}^{\Phi _{a}^{\prime }})+(U_{\Lambda _{l}}^{\Phi _{a}}+iU_{\Lambda
_{l}}^{\Phi _{a}^{\prime }})^{\ast }c_{a}-|\Lambda _{l}|\ |c_{a}|^{2},\quad
c_{a}\in \mathbb{C}.
\end{eqnarray*}%
The relative universality of this phenomenon comes -- in the case of models
considered here -- from the law of large numbers, whose representative in
our setting is the von Neumann ergodic theorem%
\index{von Neumann ergodic theorem} (Theorem \ref{vonN}). It leads again to
an approximating model by appropriately replacing an operator by a complex
number.

All mathematical results on Bogoliubov approximations are only performed on
the level of the pressure and possibly quasi--means provided the pressure is
known to be differentiable w.r.t. suitable parameters. Some conjectures have
been done on the level of states (see, e.g., \cite[Definition 3.2]%
{BruZagrebnov6}). Concerning Bose systems, we also recommend \cite%
{Fannne-Pule-Verbeure1,Fannne-Pule-Verbeure2} which prove the convex
decomposition of any translation and 
\index{Gauge invariant!equilibrium states}gauge invariant (analytic)
equilibrium state via non--gauge invariant equilibrium states provided the
existence of a Bose condensation. However, as far as we know, this monograph
is a first result describing the validity of the Bogoliubov approximation on
the level of (generalized) equilibrium states. See, e.g., Theorems \ref%
{theorem structure of omega copy(1)} and \ref{theorem structure of omega}.

\label{open problem}Indeed, Ginibre \cite[p. 28]{Ginibre} addressed as an
important open problem the question of the validity of the Bogoliubov
approximation (or Bogoliubov linearization) in the thermodynamic limit on
the level of (generalized) equilibrium states. Theorems \ref{theorem
structure of omega copy(1)} and \ref{theorem structure of omega} give a
first answer to this question, at least for the class of models treated
here. We prove that the Bogoliubov approximation is in general not exact on
the level of equilibrium states in the presence of \emph{non--trivial }%
long--range repulsions $\Phi _{a,+},\Phi _{a,+}^{\prime }\neq 0$ (a.e.), see
Definition \ref{long range attraction-repulsion}, Theorem \ref{lemma
explosion l homme capacabana1} and Corollary \ref{lemma explosion l homme
capacabana1 copy(2)}. This is so in spite of the fact that the Bogoliubov
approximation is exact for \emph{any} long--range model on the level of the
pressure. In the situation where the long--range component of the
interaction is \emph{purely attractive}, i.e., when $\Phi _{a,+}=\Phi
_{a,+}^{\prime }=0$ (a.e.)%
\index{Long--range models!purely attractive}, the Bogoliubov approximation
turns out to be always exact also on the level of generalized t.i.
equilibrium states as the min repulsive and the min--max local theories are
the same, i.e., $\mathfrak{T}_{\mathfrak{m}}^{+}=\mathfrak{T}_{\mathfrak{m}%
}^{\sharp }$, see Definitions \ref{definition min repulsive theory} and \ref%
{definition effective theories bogo} together with Theorem \ref{lemma
explosion l homme capacabana1 copy(1)}.

\subsection{Comparison with the approximating Hamiltonian method\label%
{Section historical overview}%
\index{Approximating Hamiltonian method}}

The Bogoliubov approximation was already used for Fermi systems on lattices
in 1957 to derive the celebrated Bardeen--Cooper--Schrieffer (BCS) theory
for conventional type I superconductors \cite{BCS1,BCS2,BCS3}. The authors
were of course inspired by Bogoliubov and his revolutionary paper \cite%
{Bogoliubov1}. A rigorous justification of this theory was given on the
level of ground states by Bogoliubov in 1960 \cite{Bogoliubov2-1960}. Then a
method for analyzing the Bogoliubov approximation in a systematic way -- on
the level of the pressure -- was introduced by Bogoliubov Jr. in 1966 \cite%
{Bogjunior,Bogjunior-livre} and by Brankov, Kurbatov, Tonchev, Zagrebnov
during the seventies and eighties \cite%
{approx-hamil-method0,approx-hamil-method,approx-hamil-method2}. This method
is known in the literature as the \emph{approximating Hamiltonian method}
and leads -- on the class of Hamiltonians it applies -- to a rigorous proof
of the exactness of the Bogoliubov approximation on the level of the
pressure, provided it is done in an appropriated manner, see discussions in
Section \ref{Section bog approx} about Bogoliubov linearization. For more
details, we recommend \cite{approx-hamil-method} as well as Section \ref%
{Section approx method}.

The class of lattice models on which the approximating Hamiltonian method is
applied belongs to the sub--space $\mathcal{M}_{1}^{\mathrm{d}}\subseteq 
\mathcal{M}_{1}$ of Fermi (or quantum spin) systems with discrete
long--range part%
\index{Long--range models!discrete}, see Section \ref{Section approx method}%
. Within our framework, it means that there is a finite family of
interactions $\{\Phi \}\cup \{\Phi _{k},\Phi _{k}^{\prime }\}_{k=1}^{N}$
defining $\mathfrak{m}$ (cf. Section \ref{definition models}). Observe that
in \cite{approx-hamil-method} the Hamiltonian $\mathrm{H}_{\Lambda }$ (see (%
\ref{Hamiltonian AHM})) can describe particles on lattices or on $\mathbb{R}%
^{d}$ as its local part $\mathrm{T}_{\Lambda }$ could be unbounded. However,
restricted to models of $\mathcal{M}_{1}$, our result is more general --
even on the level of the pressure -- in many aspects: We prove that the
ergodicity condition (A4) formulated in Section \ref{Section approx method}
and needed in \cite{approx-hamil-method} is, by far, unnecessary (cf. Remark %
\ref{important remark}). Moreover, by inspection of explicit examples and
using the triangle inequality of the operator norm, the commutator
inequalities (A3) are very unlikely to hold -- in general -- for all models
of $\mathcal{M}_{1}$ (cf. Remark \ref{important remarkbis}). Technically and
conceptually speaking, our study is performed in a different framework not
included in \cite{approx-hamil-method} and allows any Fermi systems $%
\mathfrak{m}\in \mathcal{M}_{1}$.

Additionally, the method discussed here gives new and \emph{deeper} results
on the level of states. It leads to a natural notion of (generalized)
equilibrium and ground states and, depending on the model $\mathfrak{m}\in 
\mathcal{M}_{1}$, it allows the direct analysis of \emph{all correlation
functions}, in contrast to the approximating Hamiltonian method which can be
applied for the pressure and possibly quasi-averages only. This is the main
and crucial difference between the approximating Hamiltonian method and our
approach using the structure of sets of states.

\part{Proofs and Complementary Results\label{part II}}

\chapter{Periodic Boundary Conditions and Gibbs Equilibrium States\label%
{section pbc}%
\index{Periodic boundary conditions}}

\setcounter{equation}{0}%
\index{States!Gibbs}We have shown in Theorem \ref{BCS main theorem 1} (i)
that the pressure of Fermi systems with long--range interactions is given in
the thermodynamic limit by two different variational problems on the set $%
E_{1}$ of t.i. states. We also present in Sections \ref{Section equilibrium
states copy(1)} and \ref{Section effective theories} a detailed study of
generalized t.i. equilibrium states. The weak$^{\ast }$--convergence of
Gibbs equilibrium states (cf. Section \ref{Section Gibbs equilibrium states}%
) to generalized t.i. equilibrium states is, a priori, not clear. In fact,
Gibbs equilibrium states do not generally converge to a generalized t.i.
equilibrium state, see Section \ref{Section Gibbs versus gen eq states}.
This depends on boundary conditions.

We introduce periodic boundary conditions and show in this particular case
that the Gibbs equilibrium state does converge in the weak$^{\ast }$%
--topology towards a generalized t.i. equilibrium state, see Section \ref%
{Section fermi pbc gibbs} (Theorem \ref{lemma limit gibbs states periodic}).
On the level of the pressure, periodic boundary conditions are
\textquotedblleft universal\textquotedblright\ in the sense that, for any $%
\mathfrak{m}\in \mathcal{M}_{1}$, the thermodynamic limit of the pressure (%
\ref{BCS pressure}) can be studied via models with periodic boundary
conditions, see Section \ref{Section fermi pbc pressure} (Theorem \ref{lemma
reduction p.b.c.}). Note that it is convenient to use \emph{interaction
kernels} to use internal energies with periodic boundary conditions as
defined in Section \ref{Section interaction kernel}. Fermi systems with
periodic boundary conditions are then defined in Section \ref{Section fermi
pbc} by means of such interaction kernels.

\begin{notation}[Periodic boundary conditions]
\label{Notation pbc}\mbox{ }\newline
\index{Periodic boundary conditions!notation}Any symbol with a tilde on the
top (for instance, $%
\tilde{p}$) is, by definition, an object related to periodic boundary
conditions.
\end{notation}

\section{Interaction kernels\label{Section interaction kernel}%
\index{Interaction!kernels}}

It is useful to describe interactions in terms of interaction kernels. This
requires some preliminary definitions.

Let $\mathbb{X}_{\mathfrak{L}}=\{+,-\}\times \mathrm{S}\times \mathfrak{L}$,
where we recall that $\mathfrak{L}:=\mathbb{Z}^{d}$ and $\mathrm{S}$ is a
finite set defining a finite dimensional Hilbert space $\mathcal{H}$ of
spins with orthonormal basis $\{e_{\mathrm{s}}\}_{\mathrm{s}\in \mathrm{S}}$%
. Elements of $\mathbb{X}_{\mathfrak{L}}$ are written as $X=(\nu ,\mathrm{s}%
,x)$ and we define $%
\bar{X}:=(\bar{\nu},\mathrm{s},x)$ with the convention $\bar{+}:=-$ and $%
\bar{-}:=+$. Then \emph{interaction kernels} are defined as follows:

\begin{definition}[Interaction kernels]
\label{definition interaction kernel}\mbox{ }\newline
\index{Interaction!kernels}An interaction kernel is a family $\varphi
=\{\varphi _{n}\}_{n\in \mathbb{N}_{0}}$ of anti--symmetric functions $%
\varphi _{n}:\mathbb{X}_{\mathfrak{L}}^{n}\rightarrow \mathbb{C}$ satisfying 
$\varphi _{n}=0$ for $n\notin 2\mathbb{N}_{0}$ as well as the
self--adjointness property: For any $X_{1},\dots ,X_{n}\in \mathbb{X}_{%
\mathfrak{L}}$, 
\begin{equation*}
\varphi _{n}(X_{1},\dots ,X_{n})=%
\overline{\varphi _{n}(\bar{X}_{n},\dots ,\bar{X}_{1})}.
\end{equation*}%
The set of all interaction kernels is denoted by $\mathcal{K}$.
\end{definition}

\begin{notation}[Interaction kernel]
\label{Notation7}\mbox{ }\newline
\index{Interaction!kernels!notation}The letter $\varphi $ is exclusively
reserved to denote interaction kernels.
\end{notation}

\noindent Note that any $\varphi \in \mathcal{K}$\ can be associated with an
interaction $\Phi \left( \varphi \right) $ (Definition \ref{definition
standard interaction}) with%
\begin{eqnarray}
\Phi _{\Lambda }\left( \varphi \right) &=&\sum\limits_{\{X_{i}=(\nu _{i},%
\mathrm{s}_{i},x_{i})\in \mathbb{X}_{\mathfrak{L}}\}_{i=1}^{n},\{x_{1},\dots
,x_{n}\}=\Lambda }\varphi _{n}(X_{1},\dots ,X_{n}):a_{x_{1},\mathrm{s}%
_{1}}^{\nu _{1}}\dots a_{x_{n},\mathrm{s}_{n}}^{\nu _{n}}:  \notag \\
&=&\sum\limits_{\{X_{i}=(\nu _{i},\mathrm{s}_{i},x_{i})\in \mathbb{X}_{%
\mathfrak{L}}\}_{i=1}^{n},\{x_{1},\dots ,x_{n}\}=\Lambda }\varphi
_{n}(X_{1},\dots ,X_{n}):a(X_{1})\dots a(X_{n}):  \label{action.func}
\end{eqnarray}%
Here, 
\begin{equation*}
a_{x,\mathrm{s}}^{-}:=a_{x,\mathrm{s}}\quad \mathrm{and}\quad a(X):=a_{x,%
\mathrm{s}}^{\nu }
\end{equation*}%
for $X=(\nu ,\mathrm{s},x)$. The notation%
\index{Normal ordered product} 
\begin{equation}
:a_{x_{1},\mathrm{s}_{1}}^{\nu _{1}}\dots a_{x_{n},\mathrm{s}_{n}}^{\nu
_{n}}:\quad :=(-1)^{\varsigma }a_{x_{\varsigma (1)},\mathrm{s}_{\varsigma
(1)}}^{\nu _{\varsigma (1)}}\dots a_{x_{\varsigma (n)},\mathrm{s}_{\varsigma
(n)}}^{\nu _{\varsigma (n)}}  \label{normal}
\end{equation}%
stands for the normal ordered product defined via any permutation $\varsigma 
$ of the set $\{1,\dots ,n\}$ moving all creation operators in the product $%
a_{x_{\varsigma (1)},\mathrm{s}_{\varsigma (1)}}^{\nu _{\varsigma (1)}}\dots
a_{x_{\varsigma (n)},\mathrm{s}_{\varsigma (n)}}^{\nu _{\varsigma (n)}}$ to
the left of all annihilation operators. This permutation is of course not
unique. The operator defined by the normal ordering is nevertheless uniquely
defined because of the factor $(-1)^{\varsigma }$ in (\ref{normal}) and
because of the CAR (\ref{CAR})%
\index{CAR}.

We use below the following convention: For any interaction kernel $\varphi $%
, $\Phi =\Phi (\varphi )$ is always an interaction as an operator valued map
on $\mathcal{P}_{f}(\mathfrak{L})$ which is formally written as%
\begin{equation}
\Phi (\varphi )=:\sum\limits_{X_{1},\dots ,X_{n}\in \mathbb{X}_{\mathfrak{L}%
}}\varphi _{n}(X_{1},\dots ,X_{n}):a(X_{1})\dots a(X_{n}):\ .
\label{eq:formal}
\end{equation}%
The map $\varphi \mapsto \Phi (\varphi )$ is not injective and hence, the
choice of kernels $\{\varphi _{n}\}$ for a given interaction $\Phi $ is not
unique. Note that (\ref{eq:formal}) is only a formal notation since infinite
sums over all $\mathfrak{L}$ do not appear in the definition of
interactions, see (\ref{action.func}). We can now transpose all properties
of interactions $\Phi $ in terms of interaction kernels $\varphi \in 
\mathcal{K}$.

First, we say that the interaction kernel $\varphi $ has \emph{finite range}%
\index{Interaction!kernels!finite range} iff there is a positive real number 
$d_{\max }$ such that $d(x,x^{\prime })>d_{\max }$ (cf. (\ref{def.dist}))
implies 
\begin{equation*}
\varphi _{n}\left( \left( \nu _{1},\mathrm{s}_{1},x\right) ,\left( \nu _{2},%
\mathrm{s}_{2},x^{\prime }\right) ,X_{3},\dots ,X_{n}\right) =0
\end{equation*}%
for any integer $n\geq 2$, any $(\nu _{1},\mathrm{s}_{1}),(\nu _{2},\mathrm{s%
}_{2})\in \{+,-\}\times \mathrm{S}$, and all $X_{3},\dots ,X_{n}\in \mathbb{X%
}_{\mathfrak{L}}$. Because of the CAR (\ref{CAR})%
\index{CAR} we can assume without loss of generality that, for any finite
range interaction $\varphi $, there is $N\in \mathbb{N}$ such that $\varphi
_{n}=0$ for all $n\geq N$. Clearly, if the interaction kernel $\varphi $ is
finite range then the corresponding interaction $\Phi =\Phi (\varphi )$ is
also finite range.

An interaction kernel $\varphi \in \mathcal{K}$ is \emph{translation
invariant}%
\index{Interaction!kernels!translation invariant} (t.i.) iff $\alpha
_{x}(\varphi )=\varphi $ for any $x\in \mathbb{Z}^{d}$. Here, $\alpha _{x}$
is action of the group of lattice translations on the set $\mathcal{K}$
defined by%
\begin{equation*}
\alpha _{x}(\varphi )_{n}((\nu _{1},\mathrm{s}_{1},x_{1}),\dots ,(\nu _{n},%
\mathrm{s}_{n},x_{n})):=\varphi _{n}((\nu _{1},\mathrm{s}_{1},x_{1}-x),\dots
,(\nu _{n},\mathrm{s}_{n},x_{n}-x)).
\end{equation*}%
Note that the notation $\alpha _{x}$ is also used to define via (\ref{transl}%
) the action of the group of lattice translations on $\mathcal{U}$. If the
interaction kernel $\varphi $ is t.i. then the interaction $\Phi =\Phi
(\varphi )$ is obviously translation invariant.

Additionally, the 
\index{Gauge invariant!kernels}\emph{gauge invariance}%
\index{Interaction!kernels!gauge invariant} of interactions $\Phi $ is
translated in terms of interaction kernels $\varphi \in \mathcal{K}$ via the
following property: For any $n\in 2\mathbb{N}$ and $X_{1}=(\nu _{1},\mathrm{s%
}_{1},x_{1}),\ldots ,X_{n}=(\nu _{n},\mathrm{s}_{n},x_{n})\in \mathbb{X}_{%
\mathfrak{L}}$,%
\begin{equation*}
\left\vert \{k\;:\;\nu _{k}=+\}\right\vert \not=\left\vert \{k\;:\;\nu
_{k}=-\}\right\vert \quad \mathrm{implies}\quad \varphi _{n}(X_{1},\ldots
,X_{n})=0.
\end{equation*}%
Here, $|\mathcal{X}|$ denotes the size (or cardinality) of a finite set $%
\mathcal{X}$.

To conclude, we introduce $\ell _{1}$--type norms in the case of t.i.
interaction kernels. Observe that usual $\ell _{1}$--norms would have no
meaning for t.i. functions as it would be either infinite or zero. Indeed,
we define the norm $\Vert 
\hspace{0.01in}\cdot \hspace{0.01in}\Vert _{1,\infty }$ on the space of t.i.
anti--symmetric functions $f_{n}$ on $\mathbb{X}_{\mathfrak{L}}^{n}$ to be 
\begin{equation*}
\Vert f_{n}\Vert _{1,\infty }:=\max\limits_{X_{1}\in \mathbb{X}_{\mathfrak{L}%
}}\sum\limits_{X_{2},\dots ,X_{n}\in \mathbb{X}_{\mathfrak{L}}}\left\vert
f_{n}\left( X_{1},\dots ,X_{n}\right) \right\vert .
\end{equation*}%
Then via this norm we can mimic on interaction kernels $\varphi $ norms of
the form $\Vert \,\cdot \,\Vert _{\varkappa }$ introduced for t.i.
interactions in Remark \ref{remark general interaction0}.

\begin{definition}[The Banach space $\mathcal{K}_{1}$ of t.i. interaction
kernels]
\label{def4.3}\mbox{ }\newline
\index{Interaction!kernels!Banach space}The real Banach space $\mathcal{K}%
_{1}$ is the set of all t.i. interaction kernels $\varphi $ with finite norm%
\begin{equation*}
\Vert \varphi \Vert _{\mathcal{K}_{1}}:=|\varphi
_{0}|+\sum\limits_{n=1}^{\infty }n\Vert \varphi _{n}\Vert _{1,\infty
}<\infty .
\end{equation*}
\end{definition}

\noindent Note that the set of finite range interaction kernels is dense in $%
\mathcal{K}_{1}$. In particular, $\mathcal{K}_{1}$ is separable. One can
also verify the following relations between the norms $\Vert 
\hspace{0.01in}\cdot \hspace{0.01in}\Vert _{\mathcal{W}_{1}}$ and $\Vert 
\hspace{0.01in}\cdot \hspace{0.01in}\Vert _{\mathcal{K}_{1}}$:

\begin{lemma}[Relationship between $\mathcal{K}_{1}$ and $\mathcal{W}_{1}$]
\label{lemma K1 dense dans W1}\mbox{ }\newline
\emph{(i)} For all $\varphi \in \mathcal{K}_{1}$, $\Vert \Phi (\varphi
)\Vert _{\mathcal{W}_{1}}\leq 2|\mathrm{S}|\ \Vert \varphi \Vert _{\mathcal{K%
}_{1}}$ with the size $|\mathrm{S}|\in \mathbb{N}$ of the finite set $%
\mathrm{S}$ being the dimension of the Hilbert space $\mathcal{H}$ of spins.%
\newline
\emph{(ii)} The set $\{\Phi \left( \varphi \right) :\varphi \in \mathcal{K}%
_{1}\}$ of t.i. interactions formally defined by (\ref{eq:formal}) is dense
in $\mathcal{W}_{1}$.
\end{lemma}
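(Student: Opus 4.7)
For part (i), the plan is a direct norm estimate. Since $\Vert a(X)\Vert=1$ by the CAR, any normal-ordered monomial $:a(X_1)\cdots a(X_n):$ with pairwise distinct $X_i$ has norm at most $1$, and vanishes otherwise. Applying the triangle inequality to (\ref{action.func}) bounds $\Vert\Phi_\Lambda(\varphi)\Vert$ by $\sum_n\sum_{\{x_i\}=\Lambda}|\varphi_n(X_1,\ldots,X_n)|$. I would then substitute into $\Vert\Phi(\varphi)\Vert_{\mathcal{W}_1}$, swap summations to reparametrize $\Lambda\ni 0$ by tuples $(X_1,\ldots,X_n)$ with $0\in\{x_i\}$, discard the harmless factor $|\{x_i\}|^{-1}\leq 1$, and use $\mathbf{1}[0\in\{x_i\}]\leq\sum_{i=1}^{n}\mathbf{1}[x_i=0]$ together with the anti-symmetry of $\varphi_n$ to relabel each term so that $x_1=0$. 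Each of the $n$ resulting contributions is then bounded by $2|\mathrm{S}|\,\Vert\varphi_n\Vert_{1,\infty}$, the factor $2|\mathrm{S}|$ counting the choices of $(\nu_1,\mathrm{s}_1)$ once $x_1=0$ is fixed. Summing over $n$ yields the claimed inequality $\Vert\Phi(\varphi)\Vert_{\mathcal{W}_1}\leq 2|\mathrm{S}|\,\Vert\varphi\Vert_{\mathcal{K}_1}$.

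For part (ii), the continuity bound from (i) together with the density of $\mathcal{W}_1^{\mathrm{f}}$ in $\mathcal{W}_1$ reduces the claim to showing that every finite-range t.i. interaction $\Phi\in\mathcal{W}_1^{\mathrm{f}}$ can be realized as $\Phi(\varphi)$ for some $\varphi\in\mathcal{K}_1$. Given such a $\Phi$, I would expand each $\Phi_\Lambda\in\mathcal{U}^+\cap\mathcal{U}_\Lambda$ in the normal-ordered basis and, passing to reduced form, reassign every monomial $:a(X_1)\cdots a(X_n):$ appearing in the expansion to the interaction component indexed by its actual site support $\{x_1,\ldots,x_n\}$; I would then define $\varphi_n(X_1,\ldots,X_n)$ as the resulting (anti-symmetrized) coefficient. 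Translation invariance of $\Phi$ yields translation invariance of $\varphi$; the condition $\Phi_\Lambda\in\mathcal{U}^+$ forces $\varphi_n=0$ for odd $n$; and $\Phi_\Lambda=\Phi_\Lambda^{\ast}$ delivers the self-adjointness property of Definition \ref{definition interaction kernel}. Finiteness of range bounds both $n$ and the diameter of supporting tuples, so $\Vert\varphi\Vert_{\mathcal{K}_1}<\infty$.

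The main obstacle is the reassignment step: a priori $\Phi_\Lambda$ may contain contributions from monomials whose site support is a proper subset of $\Lambda$, so the identity $\Phi=\Phi(\varphi)$ is not pointwise immediate and requires first putting $\Phi$ into reduced form---the convention implicit in the formal expansion (\ref{eq:formal}) under which each $\Phi_\Lambda$ carries only monomials with site support equal to $\Lambda$. Once this normalization is in place, $\Phi=\Phi(\varphi)$ becomes exact on $\mathcal{W}_1^{\mathrm{f}}$; combined with the density of $\mathcal{W}_1^{\mathrm{f}}$ in $\mathcal{W}_1$ this gives density of $\{\Phi(\varphi):\varphi\in\mathcal{K}_1\}$, consistent with the separability of $\mathcal{K}_1$ noted after Definition \ref{def4.3}.
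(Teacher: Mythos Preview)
The paper states this lemma without proof, so there is nothing to compare against directly. Your argument for part~(i) is correct and is the natural one: bound each normal--ordered monomial by~$1$, swap the sums, use $\mathbf{1}[0\in\{x_i\}]\le\sum_i\mathbf{1}[x_i=0]$ and anti--symmetry to fix $x_1=0$, and then count the $2|\mathrm S|$ choices for $(\nu_1,\mathrm s_1)$.

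Your argument for part~(ii), however, has a genuine gap at precisely the place you flag. Passing to reduced form replaces $\Phi$ by a \emph{different} interaction $\Phi'$, and showing $\Phi'=\Phi(\varphi)$ does not show that the original $\Phi$ lies in the closure of $\{\Phi(\varphi)\}$. Your last sentence tacitly assumes that the reduced finite--range interactions are themselves dense in $\mathcal W_1$, but this is not what the density of $\mathcal W_1^{\mathrm f}$ gives you, and in fact it fails. Consider, in dimension $d=1$ with $|\mathrm S|=1$, the t.i.\ finite--range interaction $\Phi$ defined by $\Phi_{\{x,x+1\}}=n_x:=a_x^+a_x$ and $\Phi_\Lambda=0$ otherwise. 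Any $\Phi(\varphi)$ has $\Phi(\varphi)_{\{0,1\}}$ in the span of normal--ordered monomials whose site support is exactly $\{0,1\}$; evaluating on the Fock vector with occupation $(1,0)$ shows $\|n_0-\Phi(\varphi)_{\{0,1\}}\|\ge 1$, hence $\|\Phi-\Phi(\varphi)\|_{\mathcal W_1}\ge\tfrac12$ for every $\varphi\in\mathcal K_1$. So the image of $\varphi\mapsto\Phi(\varphi)$ is not dense in $\mathcal W_1$ in the norm $\|\cdot\|_{\mathcal W_1}$, and statement~(ii) as written cannot be proved.

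What \emph{is} true, and what the paper actually needs when it invokes this lemma (e.g.\ in the proof of Theorem~\ref{lemma limit gibbs states periodic}), is the weaker fact that for every $\Phi\in\mathcal W_1^{\mathrm f}$ there exists $\varphi\in\mathcal K_1$ with $\mathfrak e_{\Phi(\varphi)}=\mathfrak e_\Phi$ and $U^{\Phi(\varphi)}_\Lambda=U^\Phi_\Lambda$ for all $\Lambda$; your reduction produces exactly such a $\varphi$. This suffices because the quantities to which the density is applied ($e_\Phi$, $\mathrm P^\sharp_{\mathfrak m+(\Phi,0,0)}$, etc.) depend on $\Phi$ only through $\mathfrak e_\Phi$ and the internal energies. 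You should state and prove this weaker assertion instead.
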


A typical example of an interaction $\Phi (\varphi )\in \mathcal{W}_{1}$
defined via an interaction kernel $\varphi \in \mathcal{K}_{1}$ which is 
\index{Gauge invariant!models}gauge invariant is the Hubbard model%
\index{Hubbard model} $\Phi _{\mathrm{Hubb}}$ defined as follows: $\mathrm{S}%
=\{\uparrow ,\downarrow \}$ (because electrons have spin $1/2$) and 
\begin{eqnarray*}
\Phi _{\mathrm{Hubb}} &:&=t\sum\limits_{x,y\in \mathfrak{L},d(x,y)=1,\mathrm{%
s}\in \mathrm{S}}a_{x,\mathrm{s}}^{+}a_{y,\mathrm{s}}+t^{\prime
}\sum\limits_{x,y\in \mathfrak{L},d(x,y)=%
\sqrt{2},\mathrm{s}\in \mathrm{S}}a_{x,\mathrm{s}}^{+}a_{y,\mathrm{s}} \\
&&-\mu \sum\limits_{(x,\mathrm{s})\in \mathfrak{L}\times \mathrm{S}}a_{x,%
\mathrm{s}}^{+}a_{x,\mathrm{s}}+\lambda \sum\limits_{x\in \mathfrak{L}%
}a_{x,\uparrow }^{+}a_{x,\downarrow }^{+}a_{x,\downarrow }a_{x,\uparrow }.
\end{eqnarray*}%
Here, $d(x,y)$ is the metric defined by (\ref{def.dist}) and so, the real
parameters $t$, $t^{\prime }$, $\mu $ and $\lambda $ are respectively the
nearest neighbor hopping amplitude, the next--to--nearest neighbor hopping
amplitude, the chemical potential and the interaction between pairs of
particles of different spins at the same site.

\section{Periodic boundary conditions\label{Section fermi pbc}}

We are now in position to introduce for any t.i. interaction kernel $\varphi
\in \mathcal{K}_{1}$ an interaction $\tilde{\Phi}_{l}=\tilde{\Phi}%
_{l}(\varphi )$ with periodic boundary conditions:

\begin{definition}[Periodic interactions]
\label{loc.per.int}\mbox{ }\newline
\index{Interaction!periodic boundary conditions}For any t.i. interaction
kernel $\varphi \in \mathcal{K}_{1}$ and each $l\in \mathbb{N}$, we define
the interaction $%
\tilde{\Phi}_{l}=\tilde{\Phi}_{l}(\varphi )$ with periodic boundary
conditions as follows: 
\begin{eqnarray*}
\tilde{\Phi}_{l,\Lambda } &:&=\ \ \mathbf{1}_{\left\{ \Lambda \subseteq
\Lambda _{l}\right\} }\sum\limits_{\{X_{i}=(\nu _{i},\mathrm{s}%
_{i},x_{i})\in \mathbb{X}_{\mathfrak{L}}\}_{i=1}^{n},\{x_{1},\dots
,x_{n}\}=\Lambda } \\
&&\ \ \ \left( \sum\limits_{x_{2}^{\prime },\ldots ,x_{n}^{\prime }\in 
\mathfrak{L},\;\xi _{l}(x_{i}^{\prime })=x_{i}}\varphi
_{n}(X_{1},X_{2}^{\prime }\ldots ,X_{n}^{\prime }):a(X_{1})a(X_{2})\cdots
a(X_{n}):\right)
\end{eqnarray*}%
with $X_{i}^{\prime }:=(\nu _{i},\mathrm{s}_{i},x_{i}^{\prime })$, the
normal ordered product $:a(X_{1})\cdots a(X_{n}):$ defined by (\ref{normal}%
), and $\mathbb{X}_{\mathfrak{L}}:=\{+,-\}\times \mathrm{S}\times \mathfrak{L%
}$. Here, the map $\xi _{l}:\mathfrak{L}\rightarrow \Lambda _{l}$ (cf. (\ref%
{cubic box})) is defined, for the $j^{th}$ coordinate, by $\xi
_{l}(x)_{j}=x_{j}\;\func{mod}\;2l+1$ with $j=1,\ldots ,d$.
\end{definition}

\noindent Since $\varphi \in \mathcal{K}_{1}$, observe that the operator $%
\tilde{\Phi}_{l,\Lambda }$ is clearly bounded, i.e., $\Vert \tilde{\Phi}%
_{l,\Lambda }\Vert <\infty $ for all $l\in \mathbb{N}$ and all $\Lambda \in 
\mathcal{P}_{f}(\mathfrak{L})$. The subset $\Lambda _{l}\subseteq \mathfrak{L%
}$ can be seen within this context as the torus $\mathbb{Z}^{d}/((2l+1)%
\mathbb{Z})^{d}$. Therefore, we say that the interaction $\tilde{\Phi}%
_{l,\Lambda }$ fulfills \emph{periodic boundary conditions} because it is
invariant w.r.t. translations in its corresponding torus: For all $x\in 
\mathbb{Z}^{d}$ and all $\Lambda \subseteq \Lambda _{l}$, 
\begin{equation*}
\tilde{\Phi}_{l,\xi _{l}(\Lambda +x)}=\tilde{\alpha}_{l,x}(\tilde{\Phi}%
_{l,\Lambda }).
\end{equation*}%
Here, the torus translation automorphisms $\tilde{\alpha}_{l,x}:\mathcal{U}%
_{\Lambda _{l}}\rightarrow \mathcal{U}_{\Lambda _{l}}$, $l\in \mathbb{N}$, $%
x\in \mathbb{Z}^{d}$ are defined -- uniquely -- by the condition 
\begin{equation*}
\tilde{\alpha}_{l,x}(a_{y})=\alpha _{\xi _{l}(x+y)}
\end{equation*}%
for all $y\in \Lambda _{l}$.

Then we construct from the Banach space $\mathcal{K}_{1}$ of interaction
kernels the space%
\index{Long--range models!Banach space!kernels} 
\begin{equation}
\mathcal{N}_{1}:=\mathcal{K}_{1}\times \mathcal{L}^{2}\left( \mathcal{A},%
\mathcal{K}_{1}\right) \times \mathcal{L}^{2}\left( \mathcal{A},\mathcal{K}%
_{1}\right)  \label{definition model noyaux}
\end{equation}%
of (kernel) models as explained in Section \ref{Section Preliminaries} and
define internal energies with periodic boundary conditions as follows:

\begin{definition}[Internal energy with periodic boundary conditions]
\label{definition BCS-type model periodized}\mbox{ }\newline
\index{Long--range models!internal energy!periodic boundary conditions}For
any $\mathfrak{n}:=(\varphi ,\{\varphi _{a}\}_{a\in \mathcal{A}},\{\varphi
_{a}^{\prime }\}_{a\in \mathcal{A}})\in \mathcal{N}_{1}$ and any $l\in 
\mathbb{N}$, the internal energy $%
\tilde{U}_{l}$ in the box $\Lambda _{l}$ with periodic boundary conditions
is defined to be%
\begin{equation*}
\tilde{U}_{l}:=U_{\Lambda _{l}}^{\tilde{\Phi}_{l}}+\frac{1}{|\Lambda _{l}|}%
\int_{\mathcal{A}}\gamma _{a}(U_{\Lambda _{l}}^{\tilde{\Phi}%
_{l,a}}+iU_{\Lambda _{l}}^{\tilde{\Phi}_{l,a}^{\prime }})^{\ast }(U_{\Lambda
_{l}}^{\tilde{\Phi}_{l,a}}+iU_{\Lambda _{l}}^{\tilde{\Phi}_{l,a}^{\prime }})%
\mathrm{d}\mathfrak{a}\left( a\right) ,
\end{equation*}%
where $\gamma _{a}\in \{-1,1\}$ is a measurable function and with $\tilde{%
\Phi}_{l}=\tilde{\Phi}_{l}\left( \varphi \right) $, $\tilde{\Phi}_{l,a}=%
\tilde{\Phi}_{l}(\varphi _{a})$, and $\tilde{\Phi}_{l,a}^{\prime }=\tilde{%
\Phi}_{l}(\varphi _{a}^{\prime })\ $for any $a\in \mathcal{A}$.
\end{definition}

\begin{notation}[Model kernels]
\label{Notation8}\mbox{ }\newline
\index{Long--range models!Banach space!kernels}The symbol $\mathfrak{n}$ is
exclusively reserved to denote elements of $\mathcal{N}_{1}$.
\end{notation}

\noindent Re-expressing objects in terms of interactions with periodic
boundary conditions has the advantage that the notion of translation
invariance is locally preserved. This implies, among other things, the
translation invariance of the thermodynamic limit of Gibbs equilibrium
states (Definition \ref{Gibbs.statebis}). It is an essential property to
obtain a generalized t.i. equilibrium state in the thermodynamic limit.

\begin{remark}
Any $\mathfrak{n}=(\varphi ,\{\varphi _{a}\}_{a\in \mathcal{A}},\{\varphi
_{a}^{\prime }\}_{a\in \mathcal{A}})\in \mathcal{N}_{1}$ is identified with
the long--range model $(\Phi (\varphi ),\{\Phi (\varphi _{a})\}_{a\in 
\mathcal{A}},\{\Phi (\varphi _{a}^{\prime })\}_{a\in \mathcal{A}})\in 
\mathcal{M}_{1}$ for a given $\gamma _{a}$.
\end{remark}

\section{Pressure and periodic boundary conditions \label{Section fermi pbc
pressure}}

Periodic boundary conditions are very particular and idealized in which
concerns the represented physical situations. Dirichlet--like or von
Neumann--like boundary conditions are -- physically speaking -- more
natural. In spite of that, they are extensively used in theoretical or
mathematical physics because they allow for the use\ of Fourier analysis,
making computations much easier. In fact, we show the \textquotedblleft
universality\textquotedblright\ of periodic boundary conditions on the level
of the pressure. This means that, for any $\mathfrak{m}\in \mathcal{M}_{1}$,
the thermodynamic limit of the pressure (\ref{BCS pressure}) can be studied
via models with periodic boundary conditions, see Definition \ref{definition
BCS-type model periodized}.

Indeed, observe first that periodic boundary conditions do not change the
internal energy per volume associated with any t.i. interaction kernel $%
\varphi \in \mathcal{K}_{1}$:

\begin{lemma}[Internal energy and periodic boundary conditions]
\label{P.period}\mbox{ }\newline
For any $\varphi \in \mathcal{K}_{1}$,%
\begin{equation*}
\lim\limits_{l\rightarrow \infty }%
\frac{1}{|\Lambda _{l}|}\Vert U_{\Lambda _{l}}^{\Phi \left( \varphi \right)
}-U_{\Lambda _{l}}^{\tilde{\Phi}_{l}}\Vert =0
\end{equation*}%
with $U_{\Lambda }^{\Phi }$, $\Phi \left( \varphi \right) $, and $\tilde{\Phi%
}_{l}$ respectively defined by Definition \ref{definition standard
interaction}, (\ref{action.func}) (see also (\ref{eq:formal})) and
Definition \ref{loc.per.int}.
\end{lemma}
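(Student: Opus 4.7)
The plan is as follows. By the bilinearity of both sides of the estimate in the kernel $\varphi$ (via (\ref{action.func}) and Definition \ref{loc.per.int}) and the uniform bounds
\[
\frac{1}{|\Lambda_l|}\|U_{\Lambda_l}^{\Phi(\varphi)}\| \leq \|\Phi(\varphi)\|_{\mathcal{W}_1} \leq 2|\mathrm{S}|\,\|\varphi\|_{\mathcal{K}_1}, \qquad \frac{1}{|\Lambda_l|}\|U_{\Lambda_l}^{\tilde{\Phi}_l}\| \leq C\,\|\varphi\|_{\mathcal{K}_1},
\]
where the first inequality is Lemma \ref{lemma K1 dense dans W1}(i) and the second is proved directly from Definition \ref{loc.per.int} by the same counting (each $x_1\in\Lambda_l$ contributes at most $\sum_n n\|\varphi_n\|_{1,\infty}$), the map $\varphi \mapsto \limsup_{l\to\infty}|\Lambda_l|^{-1}\|U_{\Lambda_l}^{\Phi(\varphi)}-U_{\Lambda_l}^{\tilde{\Phi}_l}\|$ is Lipschitz continuous on $\mathcal{K}_1$. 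Since finite--range interaction kernels are dense in $\mathcal{K}_1$, I may assume without loss of generality that $\varphi$ has finite range $R$ and vanishes for $n > N$.

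Next, I would expand both operators as sums of normal-ordered monomials indexed by tuples $(X_1,\dots,X_n)\in \mathbb{X}_{\mathfrak{L}}^n$. For $U_{\Lambda_l}^{\Phi(\varphi)}$, the sum is over tuples with all $x_i\in\Lambda_l$; for $U_{\Lambda_l}^{\tilde{\Phi}_l}$, the sum is over tuples with $x_1\in\Lambda_l$ and $x_i'\in\mathfrak{L}$ with $\xi_l(x_i')\in\Lambda_l$. Define the boundary strip
\[
B(l,R) := \{x\in\Lambda_l \,:\, d(x,\mathfrak{L}\setminus\Lambda_l) \leq R\}.
\]
The key observation is: if $x_1 \in \Lambda_l \setminus B(l,R)$, then for every tuple with $d(x_1,x_i')\leq R$ (the only tuples on which $\varphi_n$ is non-zero), each preimage $x_i'$ of a point of $\Lambda_l$ under $\xi_l$ that stays within $R$ of $x_1$ must coincide with $x_i'=\xi_l(x_i')\in\Lambda_l$ itself, provided $l$ is taken large enough that $2l+1>2R$. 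Hence for such $x_1$ the wrap $\xi_l$ is trivial and the two expansions agree term by term.

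It follows that $U_{\Lambda_l}^{\Phi(\varphi)}-U_{\Lambda_l}^{\tilde{\Phi}_l}$ is a sum of normal-ordered monomials in which at least one of the positions lies in $B(l,R)$. Using $\|:a(X_1)\cdots a(X_n):\| \leq 1$ and translation invariance of $\varphi$ to fix the position in $B(l,R)$ and sum freely over the others, the triangle inequality yields
\[
\|U_{\Lambda_l}^{\Phi(\varphi)}-U_{\Lambda_l}^{\tilde{\Phi}_l}\| \leq 2\,|B(l,R)|\, \sum_{n=1}^{N} n\,\|\varphi_n\|_{1,\infty} \leq 2\,|B(l,R)|\,\|\varphi\|_{\mathcal{K}_1}.
\]
Since $|B(l,R)|/|\Lambda_l| = O(R/l) \to 0$ as $l\to\infty$, this proves the claim for finite-range $\varphi$, and the density argument completes the proof.

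The only delicate point, though entirely routine, is verifying in step two that the preimage sum in Definition \ref{loc.per.int} reduces to the single trivial preimage whenever $x_1$ lies in the interior $\Lambda_l\setminus B(l,R)$; this uses nothing more than the finite-range condition on $\varphi$ together with the definition of $\xi_l$ via coordinate-wise reduction modulo $2l+1$. The factor of $2$ in the final norm bound accounts for the two types of monomials that can differ (those with a boundary position appearing in either expansion), and is immaterial once divided by $|\Lambda_l|$.
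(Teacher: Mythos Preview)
Your proof is correct. The approach differs from the paper's: the paper works directly with an arbitrary $\varphi\in\mathcal{K}_1$ by bounding
\[
\|U_{\Lambda_l}^{\Phi(\varphi)}-U_{\Lambda_l}^{\tilde{\Phi}_l}\|\leq \sum_{\substack{\{X_i\}_{i=1}^n,\ x_1\in\Lambda_l\\ \{x_2,\ldots,x_n\}\cap\Lambda_l^c\neq\emptyset}} n|\varphi_n(X_1,\ldots,X_n)|,
\]
then splits this sum according to whether $\hat d(x_1,\Lambda_l^c)\leq\sqrt{l}$ or $>\sqrt{l}$, and handles the two pieces by a volume ratio argument and by Lebesgue's dominated convergence, respectively. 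You instead first reduce to finite--range kernels via the uniform Lipschitz bound (which is correct: the key inputs $|\Lambda_l|^{-1}\|U_{\Lambda_l}^{\Phi(\varphi)}\|\leq 2|\mathrm{S}|\|\varphi\|_{\mathcal{K}_1}$ and $|\Lambda_l|^{-1}\|U_{\Lambda_l}^{\tilde{\Phi}_l}\|\leq 2|\mathrm{S}|\|\varphi\|_{\mathcal{K}_1}$ both hold by the same counting), and then for finite range the boundary estimate is a one--line $|B(l,R)|/|\Lambda_l|\to 0$. Your route trades the dominated--convergence step for the density step; the paper's route avoids the auxiliary Lipschitz estimate but must deal with arbitrary decay of $\varphi$. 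Both are clean and of comparable length.
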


\begin{proof}%
For any $\Lambda \in \mathcal{P}_{f}(\mathfrak{L})$, let $\Lambda ^{c}:=%
\mathfrak{L}\backslash \Lambda $ be its complement and we denote by 
\begin{equation*}
\hat{d}(x,\Lambda ):=\underset{x^{\prime }\in \Lambda }{\min }\left\{
d(x,x^{\prime })\right\}
\end{equation*}%
the distance between any point $x\in \mathbb{Z}^{d}$ and the set $\Lambda
\in \mathcal{P}_{f}(\mathfrak{L})$. The latter is constructed via the metric 
$d(x,x^{\prime })$ defined by (\ref{def.dist}). It follows from Definitions %
\ref{definition standard interaction} and \ref{loc.per.int} together with
Equality (\ref{action.func}) that 
\begin{eqnarray}
\Vert U_{\Lambda _{l}}^{\Phi (\varphi )}-U_{\Lambda _{l}}^{\tilde{\Phi}%
_{l}}\Vert &\leq &\sum\limits_{\{X_{i}\in \mathbb{X}_{\mathfrak{L}%
}\}_{i=1}^{n},\;x_{1}\in \Lambda _{l},\;\{x_{2},\ldots ,x_{n}\}\cap \Lambda
_{l}^{c}\not=\emptyset }n|\varphi _{n}(X_{1},\ldots ,X_{n})|  \notag \\
&\leq &\sum\limits_{\{X_{i}\in \mathbb{X}_{\mathfrak{L}}\}_{i=1}^{n},\;x_{1}%
\in \Lambda _{l},\;\{x_{2},\ldots ,x_{n}\}\cap \Lambda
_{l}^{c}\not=\emptyset }\left( \mathbf{1}_{\{\hat{d}(x_{1},\Lambda
_{l}^{c})\leq \sqrt{l}\}}n|\varphi _{n}(X_{1},\ldots ,X_{n})|\right.  \notag
\\
&&\left. +\mathbf{1}_{\{\hat{d}(x_{1},\Lambda _{l}^{c})>\sqrt{l}\}}n|\varphi
_{n}(X_{1},\ldots ,X_{n})|\right) .  \label{inequality periodic lemma}
\end{eqnarray}%
We observe that%
\begin{equation}
\lim\limits_{l\rightarrow \infty }\frac{1}{|\Lambda _{l}|}%
\sum\limits_{\{X_{i}\in \mathbb{X}_{\mathfrak{L}}\}_{i=1}^{n},\;x_{1}\in
\Lambda _{l},\;\{x_{2},\ldots ,x_{n}\}\cap \Lambda _{l}^{c}\not=\emptyset }%
\mathbf{1}_{\{\hat{d}(x_{1},\Lambda _{l}^{c})\leq \sqrt{l}\}}n|\varphi
_{n}(X_{1},\ldots ,X_{n})|=0  \label{inequality periodic lemma2}
\end{equation}%
as $\Vert \varphi \Vert _{\mathcal{K}_{1}}<\infty $. Moreover, since, by
translation invariance of the interaction kernel $\varphi $,%
\begin{eqnarray*}
&&\frac{1}{|\Lambda _{l}|}\sum\limits_{\{X_{i}\in \mathbb{X}_{\mathfrak{L}%
}\}_{i=1}^{n},\;x_{1}\in \Lambda _{l},\;\{x_{2},\ldots ,x_{n}\}\cap \Lambda
_{l}^{c}\not=\emptyset }\mathbf{1}_{\{\hat{d}(x_{1},\Lambda _{l}^{c})>\sqrt{l%
}\}}n|\varphi _{n}(X_{1},\ldots ,X_{n})| \\
&\leq &\sum\limits_{\{X_{i}\in \mathbb{X}_{\mathfrak{L}}\}_{i=2}^{n}}\
\sum\limits_{(\nu ,\mathrm{s})\in \{+,-\}\times \mathrm{S}}\mathbf{1}%
_{\{\min \{|x_{2}|,\ldots ,|x_{n}|\}>\sqrt{l}\}}n|\varphi _{n}(\tilde{X}%
,X_{2},\ldots ,X_{n})|
\end{eqnarray*}%
with $\tilde{X}:=(\nu ,\mathrm{s},0)$, we use again $\Vert \varphi \Vert _{%
\mathcal{K}_{1}}<\infty $ and Lebesgue's dominated convergence theorem to
obtain that 
\begin{equation}
\lim\limits_{l\rightarrow \infty }\frac{1}{|\Lambda _{l}|}%
\sum\limits_{\{X_{i}\in \mathbb{X}_{\mathfrak{L}}\}_{i=1}^{n},\;x_{1}\in
\Lambda _{l},\;\{x_{2},\ldots ,x_{n}\}\cap \Lambda _{l}^{c}\not=\emptyset }%
\mathbf{1}_{\{\hat{d}(x_{1},\Lambda _{l}^{c})>\sqrt{l}\}}n|\varphi
_{n}(X_{1},\ldots ,X_{n})|=0.  \label{inequality periodic lemma3}
\end{equation}%
Therefore, the lemma follows from Inequality (\ref{inequality periodic lemma}%
) together with the limits (\ref{inequality periodic lemma2}) and (\ref%
{inequality periodic lemma3}). 
\end{proof}%

To show now that the pressure (\ref{BCS pressure}) can be studied via models
with periodic boundary conditions, we need some preliminary definitions.
First, for any $\mathfrak{n}\in \mathcal{N}_{1}$ and $l\in \mathbb{N}$, let%
\begin{equation}
\tilde{p}_{l}=\tilde{p}_{l,\mathfrak{n}}:=\frac{1}{\beta |\Lambda _{l}|}\ln 
\mathrm{Trace}_{\wedge \mathcal{H}_{\Lambda }}(\mathrm{e}^{-\beta \tilde{U}%
_{l}})  \label{pressure pbc}
\end{equation}%
be the pressure associated with the internal energy $\tilde{U}_{l}$
(Definition \ref{definition BCS-type model periodized}). Then we extend the
map $\varphi \mapsto \Phi (\varphi )$ (cf. (\ref{eq:formal})) to a map $%
\mathfrak{n}\mapsto \mathfrak{m}(\mathfrak{n})$ from $\mathcal{N}_{1}$ to $%
\mathcal{M}_{1}$. To simplify the notation let%
\index{Free--energy density functional!long--range!kernels|textbf} 
\begin{equation}
\mathfrak{n}\mapsto \mathfrak{m}(\mathfrak{n})\mapsto f_{\mathfrak{m}(%
\mathfrak{n})}^{\sharp }=:f_{\mathfrak{n}}^{\sharp }.  \label{map a la con}
\end{equation}%
$f_{\mathfrak{m}}^{\sharp }$ is seen below as map from $\mathcal{M}_{1}$ to
the set $\mathcal{F}_{E_{1}}$ of affine functionals on $E_{1}$ (see
Definition \ref{Free-energy density long range} and Lemma \ref{lemma
property free--energy density functional} (i)), whereas $f_{\mathfrak{n}%
}^{\sharp }$ is seen as a map from $\mathcal{N}_{1}$ to $\mathcal{F}_{E_{1}}$
via (\ref{map a la con}). In the same way we have introduced the dense
sub--spaces $\mathcal{M}_{1}^{\mathrm{f}}$, $\mathcal{M}_{1}^{\mathrm{d}}$,
and $\mathcal{M}_{1}^{\mathrm{df}}$ in Section \ref{definition models}, we
finally define the dense sub--spaces $\mathcal{N}_{1}^{\mathrm{f}}$ and $%
\mathcal{N}_{1}^{\mathrm{d}}$ of $\mathcal{N}_{1}$ to be, respectively, the
sets of finite range $\mathfrak{n}$ and discrete elements $\mathfrak{n}$,
see Section \ref{Section interaction kernel} and (\ref{definition model
noyaux})%
\index{Long--range models!Banach space!kernels}. So, $\mathcal{N}_{1}^{%
\mathrm{df}}:=\mathcal{N}_{1}^{\mathrm{d}}\cap \mathcal{N}_{1}^{\mathrm{f}}$
is the (dense) sub--space of finite range discrete elements $\mathfrak{n}$.

We are now in position to give the main theorem of this section about the
\textquotedblleft universality\textquotedblright\ of periodic boundary
conditions w.r.t. the pressure of long--range Fermi systems.

\begin{theorem}[Reduction to periodic boundary conditions]
\label{lemma reduction p.b.c.}\mbox{ }\newline
For any $\mathfrak{m}\in \mathcal{M}_{1}^{\mathrm{df}}$, there exists $%
\mathfrak{n}\in \mathcal{N}_{1}^{\mathrm{df}}$ such that:%
\begin{equation*}
\mathrm{(i)\ }\underset{l\rightarrow \infty }{\lim }\left\{ 
\tilde{p}_{l,\mathfrak{n}}-p_{l,\mathfrak{m}}\right\} =0;\quad \mathrm{(ii)\ 
}f_{\mathfrak{m}}^{\sharp }=f_{\mathfrak{n}}^{\sharp }.
\end{equation*}
\end{theorem}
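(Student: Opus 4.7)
The strategy has two steps: first, realize a given $\mathfrak{m}\in\mathcal{M}_1^{\mathrm{df}}$ by a kernel model $\mathfrak{n}\in\mathcal{N}_1^{\mathrm{df}}$ (which immediately gives (ii)); second, estimate $\|U_l-\tilde{U}_l\|/|\Lambda_l|$ and transfer this to the pressure via a standard Hamiltonian perturbation bound.

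For the first step, I would note that since $\mathfrak{m}=(\Phi,\{\Phi_a\}_{a\in\mathcal{A}},\{\Phi_a'\}_{a\in\mathcal{A}})$ lies in $\mathcal{M}_1^{\mathrm{df}}$, the family $\{\Phi\}\cup\{\Phi_a,\Phi_a'\}_{a\in\mathcal{A}}$ contains only finitely many distinct finite--range t.i. interactions. Every finite--range t.i. interaction $\Psi$ can be written exactly in the form (\ref{action.func}) for some finite--range t.i. interaction kernel $\varphi\in\mathcal{K}_1^{\mathrm{f}}$: indeed, each local term $\Psi_\Lambda$, being a finite polynomial in the $a_{x,\mathrm{s}}^\nu$, can be put in normal form by use of the CAR (\ref{CAR}) and the anti--symmetrization/self--adjointness of kernels, and the translation invariance of $\Psi$ transfers to $\varphi$. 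Applying this componentwise produces $\mathfrak{n}=(\varphi,\{\varphi_a\},\{\varphi_a'\})\in\mathcal{N}_1^{\mathrm{df}}$ with $\mathfrak{m}(\mathfrak{n})=\mathfrak{m}$. Statement (ii) is then automatic from the definition (\ref{map a la con}).

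For statement (i), the classical inequality $|\ln\mathrm{Tr}\,\mathrm{e}^{-\beta H_1}-\ln\mathrm{Tr}\,\mathrm{e}^{-\beta H_2}|\le \beta\|H_1-H_2\|$ on the finite--dimensional space $\wedge\mathcal{H}_{\Lambda_l}$ yields
\begin{equation*}
|p_{l,\mathfrak{m}}-\tilde{p}_{l,\mathfrak{n}}|\le \frac{1}{|\Lambda_l|}\,\|U_l-\tilde{U}_l\|.
\end{equation*}
To bound the right--hand side, I would split $U_l-\tilde{U}_l$ into its local part $U_{\Lambda_l}^{\Phi}-U_{\Lambda_l}^{\tilde\Phi_l}$ (which is $o(|\Lambda_l|)$ by Lemma \ref{P.period}) and its long--range part. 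For the long--range part, set $A_{a,l}:=U_{\Lambda_l}^{\Phi_a}+iU_{\Lambda_l}^{\Phi_a'}$ and $B_{a,l}:=U_{\Lambda_l}^{\tilde\Phi_{l,a}}+iU_{\Lambda_l}^{\tilde\Phi_{l,a}'}$, and use the identity $A^{\ast}A-B^{\ast}B=A^{\ast}(A-B)+(A^{\ast}-B^{\ast})B$ to obtain
\begin{equation*}
\|A_{a,l}^{\ast}A_{a,l}-B_{a,l}^{\ast}B_{a,l}\|\le \|A_{a,l}-B_{a,l}\|\bigl(\|A_{a,l}\|+\|B_{a,l}\|\bigr).
\end{equation*}
The standard estimate $\|U_{\Lambda_l}^{\Psi}\|\le|\Lambda_l|\,\|\Psi\|_{\mathcal{W}_1}$ (and its periodic counterpart, which is proven identically) gives $\|A_{a,l}\|,\|B_{a,l}\|=O(|\Lambda_l|)$ uniformly in $a$, while Lemma \ref{P.period} applied to $\varphi_a$ and $\varphi_a'$ gives $\|A_{a,l}-B_{a,l}\|=o(|\Lambda_l|)$. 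Hence each long--range term contributes $o(|\Lambda_l|^2)$, and after dividing by $|\Lambda_l|$ (the prefactor inside the definition of the internal energy) and summing over the \emph{finitely many} $a\in\mathcal{A}$ (discreteness of $\mathfrak{n}$), we obtain $\|U_l-\tilde{U}_l\|=o(|\Lambda_l|)$, which gives (i).

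The main obstacle is the combined scaling in the long--range part: the operators $A_{a,l},B_{a,l}$ have norm of order $|\Lambda_l|$, the quadratic form is divided by only a single factor $|\Lambda_l|$, and the pressure divides by yet another. The whole estimate therefore requires that Lemma \ref{P.period} provide genuinely sub--volume control ($o(|\Lambda_l|)$, not merely $O(|\Lambda_l|)$) on the boundary correction; this is precisely what its proof delivers, and it is the reason we must first reduce to the discrete finite--range class $\mathcal{M}_1^{\mathrm{df}}$ where the integral over $\mathcal{A}$ becomes a finite sum and no further Fubini/dominated convergence argument is needed.
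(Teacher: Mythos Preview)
There is a subtle but genuine gap in your Step 1. Your claim that every finite--range t.i.\ interaction $\Psi$ can be written \emph{exactly} as $\Phi(\varphi)$ for some kernel $\varphi$ is not correct in general: the formula (\ref{action.func}) forces every normal--ordered monomial contributing to $\Phi_\Lambda(\varphi)$ to have its lattice support equal to $\Lambda$, whereas a generic $\Psi_\Lambda\in\mathcal{U}_\Lambda$ may contain monomials supported on a proper subset $\Lambda'\subsetneq\Lambda$. This is precisely why Lemma~\ref{lemma K1 dense dans W1} only asserts density of $\{\Phi(\varphi):\varphi\in\mathcal{K}_1\}$ in $\mathcal{W}_1$, not equality. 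In particular, you cannot in general arrange $\mathfrak{m}(\mathfrak{n})=\mathfrak{m}$, so (ii) is not ``automatic from the definition (\ref{map a la con})'' in the way you state.

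The paper's remedy is small but essential: instead of exact representability, one chooses the kernel $\varphi(\Phi)$ so that only the \emph{energy observables} agree, $\mathfrak{e}_\Phi=\mathfrak{e}_{\Phi(\varphi(\Phi))}$. This is always possible because for finite--range $\Phi$ the element $\mathfrak{e}_\Phi\in\mathcal{U}_0$ is local and can be expanded in normal--ordered monomials, each of which can be captured by a kernel at its true support. Equality of energy observables suffices for (ii), since $f_{\mathfrak{m}}^\sharp$ depends on the constituent interactions only through $e_\Phi,e_{\Phi_a},e_{\Phi_a'}$. For (i), the change of interaction introduces an extra error $\|U_{\Lambda_l}^{\Phi}-U_{\Lambda_l}^{\Phi(\varphi(\Phi))}\|=\mathcal{O}(|\partial\Lambda_l|)=\mathcal{O}(l^{d-1})$, which is harmless since $\mathcal{O}(l^{d-1})=o(|\Lambda_l|)$. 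Your Step~2---the pressure perturbation bound (\ref{petite inequality}), the use of Lemma~\ref{P.period} for the local part, and the algebraic treatment of the long--range quadratic terms via $A^\ast A-B^\ast B=A^\ast(A-B)+(A^\ast-B^\ast)B$ with finitely many $a$---is correct and coincides with the paper's argument once this correction in Step~1 is made.
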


\begin{proof}%
For any \emph{finite range} interaction $\Phi \in \mathcal{W}_{1}$, the
energy observable $\mathfrak{e}_{\Phi }\in \mathcal{U}^{+}$ defined by (\ref%
{eq:enpersite}) belongs to the set $\mathcal{U}_{0}$ of local elements and
thus, there is a finite range interaction kernel $\varphi (\Phi )$ such that 
\begin{equation}
\mathfrak{e}_{\Phi }=\mathfrak{e}_{\Phi (\varphi \left( \Phi \right) )}\quad 
\mathrm{and}\quad \Vert U_{\Lambda _{l}}^{\Phi }-U_{\Lambda _{l}}^{\Phi
(\varphi \left( \Phi \right) )}\Vert \leq \mathcal{O}(|\partial \Lambda
_{l}|)=\mathcal{O}(l^{d-1})  \label{fin.range.kernel}
\end{equation}%
with $\partial \Lambda _{l}$ being the boundary\footnote{%
By fixing $m\geq 1$ the boundary $\partial \Lambda $ of any $\Lambda \subset
\Gamma $ is defined by $\partial \Lambda :=\{x\in \Lambda \;:\;\exists y\in
\Gamma \backslash \Lambda \mathrm{\ with\ }d(x,y)\leq m\},$ see (\ref%
{def.dist}) for the definition of the metric $d(x,y)$.\label{footnote20}} of
the cubic box $\Lambda _{l}$. Therefore, for any finite range discrete model%
\begin{equation*}
\mathfrak{m}:=\{\Phi \}\cup \{\Phi _{k},\Phi _{k}^{\prime }\}_{k=1}^{N}\in 
\mathcal{M}_{1}^{\mathrm{df}},
\end{equation*}%
there exists 
\begin{equation*}
\mathfrak{n}:=\{\varphi (\Phi )\}\cup \{\varphi (\Phi _{k}),\varphi (\Phi
_{k}^{\prime })\}_{k=1}^{N}\in \mathcal{N}_{1}^{\mathrm{df}}
\end{equation*}%
satisfying (\ref{fin.range.kernel}) for each interaction $\Phi $, $\Phi _{k}$%
, and $\Phi _{k}^{\prime }$. Any $\mathfrak{n}\in \mathcal{N}_{1}$ defines
an internal energy $\tilde{U}_{l}$ with periodic boundary conditions. So,
the first statement (i) of the lemma is a consequence of the bound 
\begin{equation}
|\ln (\mathrm{Trace}_{\wedge \mathcal{H}_{\Lambda }}(\mathrm{e}^{A}))-\ln (%
\mathrm{Trace}_{\wedge \mathcal{H}_{\Lambda }}(\mathrm{e}^{B}))|\leq \Vert
A-B\Vert  \label{petite inequality}
\end{equation}%
combined with Lemma \ref{P.period} for any t.i. interaction kernel $\varphi
\in \mathcal{K}_{1}$. The second statement (ii) is a direct consequence of (%
\ref{fin.range.kernel}). 
\end{proof}%

\begin{remark}
Note that the restriction $\mathfrak{m}\in \mathcal{M}_{1}^{\mathrm{df}}$ in
this last theorem is unimportant, see Corollary \ref{lemma reduction
dfbisbis}.
\end{remark}

\section{Gibbs and generalized t.i. equilibrium states\label{Section fermi
pbc gibbs}}

\index{States!Gibbs}Periodic boundary conditions are, on the level of the
pressure, universal in the sense described by Theorem \ref{lemma reduction
p.b.c.}. However, it is important to note that periodic boundary conditions
do \emph{not} yield a complete thermodynamic description of long--range
Fermi systems on the level of equilibrium states. As shown below (Theorem %
\ref{lemma limit gibbs states periodic}), any weak$^{\ast }$--convergent
sequence of Gibbs equilibrium states (Definition \ref{Gibbs.statebis}) of
long--range Fermi systems with periodic boundary conditions converges to a
generalized t.i. equilibrium state. The convergence of arbitrary convergent
sequences $\rho _{l}$ of (local) Gibbs equilibrium states of t.i.
long--range models $\mathfrak{m}\in \mathcal{M}_{1}$ (defined by $\rho
_{l}:=\rho _{\Lambda _{l},U_{l}}$ (\ref{Gibbs.state})) towards a
(infinite--volume) generalized t.i. equilibrium state is, a priori, not
clear and could in fact be even wrong in some cases (depending on boundary
conditions). Together with Theorem \ref{lemma reduction p.b.c.}, this means
that the infimum over the set $E$ of all states given in Theorem \ref{BCS
main theorem 1 copy(1)} (i) could also be attained by a sequence of
approximating minimizers (cf. (\ref{approximating minimizer})) with weak$%
^{\ast }$--limit points not in $E_{1}$ as explained in Section \ref{Section
Gibbs versus gen eq states}.

Therefore, we study now the convergence of (local) Gibbs equilibrium states
only for the particular case of periodic boundary conditions, i.e., the
convergence of the states $%
\tilde{\rho}_{l}:=\rho _{\Lambda _{l},\tilde{U}_{l}}$ (\ref{Gibbs.state}).
Note that this state $\tilde{\rho}_{l}$ is as usual seen as defined either
on the local algebra $\mathcal{U}_{\Lambda _{l}}$ or on the whole algebra $%
\mathcal{U}$ by periodically extending it (with period $(2l+1)$ in each
direction of the lattice $\mathfrak{L}$). Observe here that, by the
definition of interaction kernels, $\tilde{\rho}_{l}$ is an even state and
hence products of translates of $\tilde{\rho}_{l}$ are well--defined (cf. 
\cite[Theorem 11.2.]{Araki-Moriya}). The Gibbs equilibrium state $\tilde{\rho%
}_{l}$ is \emph{generally not }translation invariant. We construct the
space--averaged t.i. Gibbs state $\hat{\rho}_{l}\in E_{1}$ from $\tilde{\rho}%
_{l}$ as it is done in (\ref{t.i. state rho l}), that is,%
\index{States!Gibbs!space--averaged t.i.}%
\begin{equation}
\hat{\rho}_{l}:=\frac{1}{|\Lambda _{l}|}\sum\limits_{x\in \Lambda _{l}}%
\tilde{\rho}_{l}\circ \alpha _{x},  \label{space average state p.b.c.}
\end{equation}%
where we recall that the $\ast $--automorphisms $\{\alpha _{x}\}_{x\in 
\mathbb{Z}^{d}}$ defined by (\ref{transl}) are the action of the group of
lattice translations on $\mathcal{U}$. Then, from Theorems \ref{BCS main
theorem 1} (i) and \ref{eq.tang.bcs.type}, we prove the convergence of local
states $\tilde{\rho}_{l}$ and $\hat{\rho}_{l}$ towards the same generalized
t.i. equilibrium state:

\begin{theorem}[Weak$^{\ast }$--limit of Gibbs equilibrium states]
\label{lemma limit gibbs states periodic}\mbox{ }\newline
For any $\mathfrak{n}\in \mathcal{N}_{1}$, the states $\tilde{\rho}_{l}$ and 
$\hat{\rho}_{l}$ converge in the weak$^{\ast }$--topology along any
convergent subsequence towards the same generalized t.i. equilibrium state $%
\omega \in \mathit{\Omega }_{\mathfrak{n}}^{\sharp }$.%
\index{States!Gibbs!space--averaged t.i.}%
\index{States!Gibbs}
\end{theorem}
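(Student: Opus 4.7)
The plan is to first show that the space--averaged t.i.\ states $\hat{\rho}_l$ and the Gibbs states $\tilde{\rho}_l$ with periodic boundary conditions have the same weak$^{\ast}$--accumulation points, and then characterize any such limit as a generalized t.i.\ equilibrium state via the tangent functional representation of Theorem \ref{eq.tang.bcs.type}.

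First, I would exploit the torus invariance of $\tilde{U}_l$: because $\tilde{\Phi}_{l,\xi_l(\Lambda+x)}=\tilde{\alpha}_{l,x}(\tilde{\Phi}_{l,\Lambda})$ and the long--range part of $\tilde{U}_l$ is similarly torus invariant, one has $\tilde{\rho}_l\circ\tilde{\alpha}_{l,x}=\tilde{\rho}_l$ for every $x\in\mathbb{Z}^d$. For any local $A\in\mathcal{U}_\Lambda$ and any $x\in\Lambda_l$ with $\Lambda+x\subseteq\Lambda_l$, the true translation $\alpha_x(A)$ coincides with the torus translation $\tilde{\alpha}_{l,x}(A)$, so that $\tilde{\rho}_l(\alpha_x(A))=\tilde{\rho}_l(A)$. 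The number of $x\in\Lambda_l$ for which $\Lambda+x\not\subseteq\Lambda_l$ is of order $|\partial\Lambda_l|=o(|\Lambda_l|)$, hence
\begin{equation*}
\bigl|\hat{\rho}_l(A)-\tilde{\rho}_l(A)\bigr|\leq \frac{|\partial\Lambda_l|}{|\Lambda_l|}\cdot 2\|A\|\xrightarrow[l\to\infty]{}0.
\end{equation*}
By density of $\mathcal{U}_0$ in $\mathcal{U}$ and uniform boundedness of states, this gives $\hat{\rho}_l-\tilde{\rho}_l\to 0$ in the weak$^{\ast}$--topology, so the two sequences share the same weak$^{\ast}$--accumulation points, and every such point belongs to $E_1$ since $\hat{\rho}_l\in E_1$ (which is weak$^{\ast}$--closed).

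Let now $\omega\in E_1$ be any such accumulation point along some subsequence $l_k\to\infty$. By Theorem \ref{eq.tang.bcs.type}, it suffices to prove that $\omega\in\mathit{T}_{\mathfrak{m}(\mathfrak{n})}^{\sharp}$, i.e.\ that $\omega$ is a continuous tangent functional at $0\in\mathcal{W}_1$ to the map $\Phi\mapsto \mathrm{P}_{\mathfrak{m}(\mathfrak{n})+(\Phi,0,0)}^{\sharp}$. For every finite range $\Phi\in\mathcal{W}_1^{\mathrm{f}}$, choose a finite range kernel $\varphi(\Phi)\in\mathcal{K}_1$ as in (\ref{fin.range.kernel}) and apply the passivity of Gibbs states (Theorem \ref{passivity.Gibbs}) with periodic internal energies, giving
\begin{equation*}
\tilde{p}_{l,\mathfrak{n}+(\varphi(\Phi),0,0)}-\tilde{p}_{l,\mathfrak{n}}\geq -\frac{1}{|\Lambda_l|}\tilde{\rho}_l\bigl(U_{\Lambda_l}^{\tilde{\Phi}_l(\varphi(\Phi))}\bigr).
\end{equation*}
By Lemma \ref{P.period} and (\ref{fin.range.kernel}), the right--hand side differs from $-|\Lambda_l|^{-1}\tilde{\rho}_l(U_{\Lambda_l}^{\Phi})$ by $o(1)$ as $l\to\infty$; and $|\Lambda_l|^{-1}\tilde{\rho}_l(U_{\Lambda_l}^{\Phi})$ has the same asymptotics as $e_\Phi(\hat{\rho}_l)$ (by Lemma \ref{lemma mean energy trivial-1} applied to $\hat{\rho}_l$, after replacing $\tilde{\rho}_l$ by $\hat{\rho}_l$ using Step 1), which converges along $l_k$ to $e_\Phi(\omega)$ by the weak$^{\ast}$--continuity of $e_\Phi$ (Lemma \ref{Th.en.func} (i)). Meanwhile, the left--hand side converges to $\mathrm{P}_{\mathfrak{m}(\mathfrak{n})+(\Phi,0,0)}^{\sharp}-\mathrm{P}_{\mathfrak{m}(\mathfrak{n})}^{\sharp}$ by Theorem \ref{lemma reduction p.b.c.}\,(i) combined with Definition \ref{Pressure}. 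We therefore obtain
\begin{equation*}
\mathrm{P}_{\mathfrak{m}(\mathfrak{n})+(\Phi,0,0)}^{\sharp}-\mathrm{P}_{\mathfrak{m}(\mathfrak{n})}^{\sharp}\geq -e_\Phi(\omega)
\end{equation*}
for every $\Phi\in\mathcal{W}_1^{\mathrm{f}}$, and then for every $\Phi\in\mathcal{W}_1$ by density of $\mathcal{W}_1^{\mathrm{f}}$ combined with the Lipschitz continuity of both $\Phi\mapsto \mathrm{P}_{\mathfrak{m}(\mathfrak{n})+(\Phi,0,0)}^{\sharp}$ (Theorem \ref{BCS main theorem 1}\,(ii)) and $\Phi\mapsto e_\Phi(\omega)$ (Lemma \ref{Th.en.func}\,(ii)). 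This exhibits $\omega$ as a continuous tangent functional at $0$, so $\omega\in\mathit{T}_{\mathfrak{m}(\mathfrak{n})}^{\sharp}=\mathit{\Omega}_{\mathfrak{n}}^{\sharp}$, concluding the proof.

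The main obstacle to be treated carefully is the handling of the \emph{long--range part} of $\tilde{U}_l$ when extending Lemma \ref{lemma mean energy trivial-1} from purely local perturbations to the full internal energy: the passivity argument above only perturbs by local $\Phi$, so that the long--range contribution to $\tilde{U}_l$ cancels in the difference $\tilde{p}_{l,\mathfrak{n}+(\varphi(\Phi),0,0)}-\tilde{p}_{l,\mathfrak{n}}$ and never needs to be estimated directly; this is precisely why the tangent--functional route is preferable to a direct weak$^{\ast}$--continuity argument on $f_{\mathfrak{n}}^{\sharp}$, which is not lower semi--continuous when the repulsive long--range component is non--trivial. Everything else reduces to combining Theorem \ref{lemma reduction p.b.c.}, Theorem \ref{BCS main theorem 1}, Theorem \ref{eq.tang.bcs.type} and Lemma \ref{P.period}.
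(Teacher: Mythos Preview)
Your proposal is correct and follows essentially the same route as the paper: torus invariance of $\tilde{\rho}_l$ to identify its weak$^{\ast}$--limit points with those of $\hat{\rho}_l$, then passivity of Gibbs states combined with Theorem \ref{eq.tang.bcs.type}, Lemma \ref{P.period} and Theorem \ref{lemma reduction p.b.c.} to exhibit any limit as a tangent functional at $0$, exactly mirroring the proof of Theorem \ref{lemma limit averaging gibbs states}. The only point you leave implicit is that Theorem \ref{lemma reduction p.b.c.} is stated for $\mathcal{M}_1^{\mathrm{df}}$, so the density of $\mathcal{N}_1^{\mathrm{df}}$ in $\mathcal{N}_1$ (with equicontinuity as in Lemma \ref{lemma reduction df}) is needed to pass to general $\mathfrak{n}\in\mathcal{N}_1$; the paper mentions this density explicitly.
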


\begin{proof}%
By weak$^{\ast }$--compactness of $E_{1}$, the space--averaged t.i. Gibbs
state $%
\hat{\rho}_{l}$ converges in the weak$^{\ast }$--topology along a
subsequence towards $\omega \in E_{1}$. By translation invariance of $\tilde{%
\rho}_{l}$ in the torus $\Lambda _{l}$, it is also easy to see that the
sequences of states $\tilde{\rho}_{l}$ and $\hat{\rho}_{l}$ have the same
weak$^{\ast }$--limit points. Then, since Theorem \ref{eq.tang.bcs.type}
says that $\mathit{T}_{\mathfrak{m}}^{\sharp }=\mathit{\Omega }_{\mathfrak{m}%
}^{\sharp }$ for all $\mathfrak{m}\in \mathcal{M}_{1}$, we show that $\omega
\in \mathit{T}_{\mathfrak{n}}^{\sharp }$ in the same way we prove Theorem %
\ref{lemma limit averaging gibbs states} because of Lemma \ref{P.period},
Theorem \ref{lemma reduction p.b.c.}, and the density of the sets $\mathcal{N%
}_{1}^{\mathrm{df}}$ and $\{\Phi \left( \varphi \right) \}_{\varphi \in 
\mathcal{K}_{1}}$ respectively in $\mathcal{N}_{1}$ and $\mathcal{W}_{1}$.
We omit the details. 
\end{proof}%

\chapter{The Set $E_{\vec{\ell}}$ of $\mathbb{Z}_{\vec{\ell}}^{d}$%
--Invariant States\label{section set of invarant states}}

\setcounter{equation}{0}%
\index{States!l--invariant}In this chapter, we study in details the
structure of the convex and weak$^{\ast }$--compact sets $E_{%
\vec{\ell}}$ of $\mathbb{Z}_{\vec{\ell}}^{d}$--invariant states defined by (%
\ref{periodic invariant states}) for any $\vec{\ell}\in \mathbb{N}^{d}$. The
set $\mathcal{E}_{\vec{\ell}}$ of extreme points of $E_{\vec{\ell}}$ is
intimately related with a property of ergodicity (Definition \ref{def:egodic}%
). For $\vec{\ell}=(1,\cdots ,1)$, the ergodicity of states is characterized
via the space--averaging functional $\Delta _{A}$ defined for any $A\in 
\mathcal{U}$ in Definition \ref{definition de deltabis}.

We discuss in Section \ref{Section theorem ergodic extremalbis} the main
structural properties of the set $\mathcal{E}_{\vec{\ell}}$ and analyze the
map $\Delta _{A}$ in Section \ref{Section properties of delta}. The
properties of the entropy density functional $s$ defined in Definition \ref%
{entropy.density} are discussed in Section \ref{section neuman entropy}. In
Section \ref{Section state=functional on W} we analyze the energy density
functional $e_{\Phi }$\ defined, for any t.i. interaction $\Phi \in \mathcal{%
W}_{1}$, in\ Definition \ref{definition energy density}. By means of the
energy density $e_{\Phi }$, each $\vec{\ell}$--periodic state $\rho \in E_{%
\vec{\ell}}$ defines a continuous linear functional $\mathbb{T}\left( \rho
\right) \in \mathcal{W}_{1}^{\ast }$ on the Banach space $\mathcal{W}_{1}$
(Definition \ref{definition banach space interaction}). The map $\rho
\mapsto \mathbb{T}\left( \rho \right) $ restricted to the set $E_{1}$ of
t.i. states is injective. This allows the identification of states of $E_{1}$
with functionals of $\mathcal{W}_{1}^{\ast }$.

Note that some important statements presented here are standard (see, e.g.,
Theorems \ref{GNS} and \ref{vonN}). They are given in Section \ref{Section
GNS Neuman} for completeness. We start with a preliminary discussion about
the Gelfand--Naimark--Segal (GNS) representation of $G$--invariant states 
\cite[Corollary 2.3.17]{BrattelliRobinsonI} and then about the von Neumann
ergodic theorem \cite[Proposition 4.3.4]{BrattelliRobinsonI}.

\section{GNS representation and the von Neumann ergodic theorem\label%
{Section GNS Neuman}}

\index{States!GNS representation|textbf}Any state $\rho \in E$ has a GNS
representation \cite[Theorem 2.3.16]{BrattelliRobinsonI}: For any $\rho \in
E $, there exist a Hilbert space $\mathcal{H}_{\rho }$, a representation $%
\pi _{\rho }:\mathcal{U}\rightarrow \mathcal{B}(\mathcal{H}_{\rho })$ from $%
\mathcal{U}$ to the set $\mathcal{B}(\mathcal{H}_{\rho })$ of bounded
operators on $\mathcal{H}_{\rho }$, and a cyclic vector $\Omega _{\rho }\in 
\mathcal{H}_{\rho }$ w.r.t. $\pi _{\rho }(\mathcal{U})$ such that, for all $%
A\in \mathcal{U}$, 
\begin{equation*}
\rho (A)=\langle \Omega _{\rho },\pi _{\rho }(A)\Omega _{\rho }\rangle .
\end{equation*}%
The representation $\pi _{\rho }$ is faithful if $\rho $ is faithful, that
is, if $\rho (A^{\ast }A)=0$ implies $A=0$. The triple $(\mathcal{H}_{\rho
},\pi _{\rho },\Omega _{\rho })$ is unique up to unitary equivalence.

Assume now the existence of a group homomorphism $g\mapsto \alpha _{g}$ from 
$G$ to the group of $\ast $--automorphisms of $\mathcal{U}$. The state $\rho 
$ is $G$--invariant iff $\rho \circ \alpha _{g}=\rho $ for any $g\in G$. The
GNS representation of such a $G$--invariant state $\rho $ carries this
symmetry through a uniquely defined family of unitary operators, see \cite[%
Corollary 2.3.17]{BrattelliRobinsonI}:

\begin{theorem}[GNS representation of $G$--invariant states]
\label{GNS}\mbox{ }\newline
\index{States!GNS representation!G-invariant|textbf}Let $\rho $ be a $G$%
--invariant state with GNS\ representation $(\mathcal{H}_{\rho },\pi _{\rho
},\Omega _{\rho })$. Then there is a uniquely defined family $%
\{U_{g}\}_{g\in G}$ of unitary operators in $\mathcal{B}(\mathcal{H}_{\rho
}) $ with invariant vector $\Omega _{\rho }$, i.e., $\Omega _{\rho
}=U_{g}\Omega _{\rho }$ for any $g\in G$, and such that $\pi _{\rho }(\alpha
_{g}(A))=U_{g}\pi _{\rho }(A)U_{g}^{\ast }$ for any $g\in G$ and $A\in 
\mathcal{U}$. In particular, $U_{g_{1}+g_{2}}=U_{g_{1}}U_{g_{2}}$ for any $%
g_{1},g_{2}\in G$.
\end{theorem}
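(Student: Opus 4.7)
The plan is to construct the unitaries $U_g$ explicitly on the dense subspace $\pi_\rho(\mathcal{U})\Omega_\rho \subseteq \mathcal{H}_\rho$ via the formula
\begin{equation*}
U_g \pi_\rho(A)\Omega_\rho := \pi_\rho(\alpha_g(A))\Omega_\rho, \qquad A \in \mathcal{U},
\end{equation*}
and then to extend by continuity. The very first step is to check that this prescription is well-defined and isometric on the dense domain, which is where $G$--invariance of $\rho$ enters: for any $A \in \mathcal{U}$, the computation
\begin{equation*}
\|\pi_\rho(\alpha_g(A))\Omega_\rho\|^2 = \rho\bigl(\alpha_g(A)^\ast \alpha_g(A)\bigr) = \rho\bigl(\alpha_g(A^\ast A)\bigr) = \rho(A^\ast A) = \|\pi_\rho(A)\Omega_\rho\|^2
\end{equation*}
shows both that the value depends only on $\pi_\rho(A)\Omega_\rho$ (not on the choice of representative $A$) and that the resulting map is norm--preserving. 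Since $\alpha_g$ is a bijection of $\mathcal{U}$, the range of $U_g$ on the dense subspace already contains $\pi_\rho(\mathcal{U})\Omega_\rho$, so the continuous extension of $U_g$ to all of $\mathcal{H}_\rho$ is a surjective isometry, i.e.\ a unitary operator.

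Next I would verify the three listed properties in turn. Invariance of $\Omega_\rho$ is immediate by taking $A = \mathbf{1}$ and using $\alpha_g(\mathbf{1}) = \mathbf{1}$. The intertwining relation follows from a one--line computation on the dense subspace: for any $A, B \in \mathcal{U}$,
\begin{equation*}
U_g \pi_\rho(A)\,\pi_\rho(B)\Omega_\rho = \pi_\rho(\alpha_g(AB))\Omega_\rho = \pi_\rho(\alpha_g(A))\,\pi_\rho(\alpha_g(B))\Omega_\rho = \pi_\rho(\alpha_g(A))\,U_g \pi_\rho(B)\Omega_\rho,
\end{equation*}
so that $U_g \pi_\rho(A) = \pi_\rho(\alpha_g(A))\,U_g$ first on the dense domain and then, by continuity, on all of $\mathcal{H}_\rho$; applying $U_g^\ast$ on the right yields $\pi_\rho(\alpha_g(A)) = U_g \pi_\rho(A) U_g^\ast$. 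The homomorphism identity $U_{g_1+g_2} = U_{g_1}U_{g_2}$ is then obtained by applying both sides to $\pi_\rho(A)\Omega_\rho$ and using $\alpha_{g_1} \circ \alpha_{g_2} = \alpha_{g_1 + g_2}$, again extending by density.

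For uniqueness, suppose $\{V_g\}_{g \in G}$ is any other family of unitaries satisfying the stated properties. Then, using the invariance of $\Omega_\rho$ under $V_g$ together with the intertwining relation,
\begin{equation*}
V_g \pi_\rho(A)\Omega_\rho = V_g \pi_\rho(A) V_g^\ast \Omega_\rho = \pi_\rho(\alpha_g(A))\Omega_\rho = U_g \pi_\rho(A)\Omega_\rho,
\end{equation*}
and cyclicity of $\Omega_\rho$ for $\pi_\rho(\mathcal{U})$ combined with boundedness forces $V_g = U_g$. There is no genuinely hard step here: the only subtle point is the well--definedness/isometry argument at the start, which is exactly what makes $G$--invariance of $\rho$ (as opposed to mere covariance of $\pi_\rho$) indispensable; everything else is algebraic manipulation on a dense subspace followed by a standard continuity/density extension.
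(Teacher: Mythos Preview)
Your proof is correct and is precisely the standard argument underlying \cite[Corollary 2.3.17]{BrattelliRobinsonI}, which the paper simply cites for existence and uniqueness of the $U_g$. The only place the paper adds detail is the homomorphism identity $U_{g_1+g_2}=U_{g_1}U_{g_2}$, and there it takes a slightly different route than you do: rather than computing both sides on $\pi_\rho(A)\Omega_\rho$ directly, it shows via the intertwining relation that $U_{g_1}U_{g_2}$ satisfies the same defining properties as $U_{g_1+g_2}$ (conjugation by it implements $\alpha_{g_1+g_2}$ and it fixes $\Omega_\rho$), and then invokes the uniqueness clause to conclude they coincide. Your direct density argument is arguably simpler; the paper's variant has the minor advantage of illustrating how uniqueness can be used as a tool once established.
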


\begin{proof}
See \cite[Corollary 2.3.17]{BrattelliRobinsonI}. In particular, for any $%
g_{1},g_{2}\in G$ and $A\in \mathcal{U}$,%
\begin{eqnarray*}
U_{g_{1}+g_{2}}\pi _{\rho }(A)U_{g_{1}+g_{2}}^{\ast } &=&\pi _{\rho }(\alpha
_{g_{1}+g_{2}}(A))=\pi _{\rho }(\alpha _{g_{1}}\circ \alpha _{g_{2}}(A)) \\
&=&U_{g_{1}}\pi _{\rho }(\alpha _{g_{2}}(A))U_{g_{1}}^{\ast
}=U_{g_{1}}U_{g_{2}}\pi _{\rho }(\alpha _{g_{2}}(A))U_{g_{2}}^{\ast
}U_{g_{1}}^{\ast }.
\end{eqnarray*}%
By uniqueness of the family $\{U_{g}\}_{g\in G}$, one gets $%
U_{g_{1}+g_{2}}=U_{g_{1}}U_{g_{2}}$ for any $g_{1},g_{2}\in G$.%
\end{proof}%

Since we study the set $E_{%
\vec{\ell}}$ (\ref{periodic invariant states}) of $\vec{\ell}$\emph{%
--periodic} states, the special cases we are interested in are $G=(\mathbb{Z}%
_{\vec{\ell}}^{d},+)$ for all $\vec{\ell}\in \mathbb{N}^{d}$. The group
homomorphism $g\mapsto \alpha _{g}$ from $G$ to the group of $\ast $%
--automorphisms of $\mathcal{U}$ corresponds, in this case, to the group $%
\{\alpha _{x}\}_{x\in \mathbb{Z}^{d}}$ (\ref{transl}) of lattice
translations on $\mathcal{U}$. Within this framework, an essential
ingredient of our analysis is the von Neumann ergodic theorem \cite[%
Proposition 4.3.4]{BrattelliRobinsonI} which is a representative of the law
of large numbers:

\begin{theorem}[von Neumann ergodic theorem]
\label{vonN}\mbox{ }\newline
\index{von Neumann ergodic theorem|textbf}Let $x\mapsto U_{x}$ be a
representation of the abelian group $(\mathbb{Z}_{%
\vec{\ell}}^{d},+)$ by unitary operators on a Hilbert space $\mathcal{H}$
and the set 
\begin{equation*}
I:=\bigcap_{x\in \mathbb{Z}_{\vec{\ell}}^{d}}\{\psi \in \mathcal{H}\,:\,\psi
=U_{x}(\psi )\}
\end{equation*}%
be the closed sub--space of all invariant vectors. For any $L\in \mathbb{N}$%
, define the contraction%
\begin{equation*}
P^{(L)}:=\frac{1}{|\Lambda _{L}\cap \mathbb{Z}_{\vec{\ell}}^{d}|}%
\sum\limits_{x\in \Lambda _{L}\cap \mathbb{Z}_{\vec{\ell}}^{d}}U_{x}\in 
\mathcal{B}(\mathcal{H})
\end{equation*}%
and denote the orthogonal projection on $I$ by $P$. Then, for all $L\in 
\mathbb{N}$, $PP^{(L)}=P^{(L)}P=P$ and the operator $P^{(L)}$ converges
strongly to $P$ as $L\rightarrow \infty $.
\end{theorem}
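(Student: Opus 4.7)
The plan is to reduce everything to the orthogonal decomposition $\mathcal{H}=I\oplus I^{\perp}$ and then to show that $P^{(L)}$ is the identity on $I$ and converges strongly to $0$ on $I^{\perp}$. Since $P^{(L)}$ is an average of unitaries it satisfies $\|P^{(L)}\|\leq 1$ uniformly in $L$, so by the usual $\varepsilon/3$ argument, strong convergence on $I^{\perp}$ only has to be verified on a norm--dense subset.

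First I would establish the algebraic identities $PP^{(L)}=P^{(L)}P=P$. For any $\psi\in I$ one has $U_{x}\psi=\psi$ for every $x\in\mathbb{Z}_{\vec{\ell}}^{d}$, hence $P^{(L)}\psi=\psi$, which yields $P^{(L)}P=P$. On the other hand, each $U_{x}$ is unitary and leaves $I$ invariant (because $U_{y}U_{x}\psi=U_{x+y}\psi=U_{x}U_{y}\psi=U_{x}\psi$ for $\psi\in I$); therefore $U_{x}$ also preserves $I^{\perp}$ and commutes with the orthogonal projection $P$. Consequently $P^{(L)}$ commutes with $P$, so $PP^{(L)}=P^{(L)}P=P$, and $P^{(L)}$ maps $I^{\perp}$ into itself. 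This reduces the convergence statement to proving that $P^{(L)}\psi\to 0$ for every $\psi\in I^{\perp}$.

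The next step is to identify a convenient dense subset of $I^{\perp}$. A vector $\psi$ is orthogonal to every vector of the form $(U_{y}-\mathbf{1})\phi$, $y\in\mathbb{Z}_{\vec{\ell}}^{d}$ and $\phi\in\mathcal{H}$, if and only if $(U_{-y}-\mathbf{1})\psi=0$ for every such $y$, i.e.\ $\psi\in I$. Hence the linear span $\mathcal{D}$ of $\{(U_{y}-\mathbf{1})\phi:y\in\mathbb{Z}_{\vec{\ell}}^{d},\ \phi\in\mathcal{H}\}$ is norm--dense in $I^{\perp}$. For $\psi=(U_{y}-\mathbf{1})\phi$, the translation invariance of counting measure on the shifted lattice gives the telescoping identity
\begin{equation*}
P^{(L)}\psi=\frac{1}{|\Lambda_{L}\cap\mathbb{Z}_{\vec{\ell}}^{d}|}\Bigl(\sum_{z\in(\Lambda_{L}+y)\cap\mathbb{Z}_{\vec{\ell}}^{d}}U_{z}\phi\ -\sum_{z\in\Lambda_{L}\cap\mathbb{Z}_{\vec{\ell}}^{d}}U_{z}\phi\Bigr).
\end{equation*}
Each term in the resulting difference is supported on the symmetric difference $(\Lambda_{L}+y)\triangle\Lambda_{L}$, whose intersection with $\mathbb{Z}_{\vec{\ell}}^{d}$ has cardinality $O(|y|\,L^{d-1})$ while the denominator is of order $L^{d}$. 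Since each $U_{z}$ is a contraction, $\|P^{(L)}\psi\|=O(|y|\,L^{-1})\|\phi\|\to 0$. By linearity this extends to all of $\mathcal{D}$, and then by density together with $\|P^{(L)}\|\leq 1$ to the whole of $I^{\perp}$.

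The only step where care is needed is the telescoping bound; everything else is essentially formal. The main point to get right is the Van Hove--type estimate on the lattice $\mathbb{Z}_{\vec{\ell}}^{d}$, namely that $|(\Lambda_{L}+y)\triangle\Lambda_{L}|$ is comparable to $|y|\,L^{d-1}$ relative to the volume $|\Lambda_{L}\cap\mathbb{Z}_{\vec{\ell}}^{d}|\sim(2L+1)^{d}/(\ell_{1}\cdots\ell_{d})$, which is an elementary but slightly tedious combinatorial verification. Once this is in place, the strong convergence $P^{(L)}\to P$ follows immediately from the two paragraphs above.
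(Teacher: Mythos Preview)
Your proof is correct. However, it takes a different route from the paper's argument. You use the classical Riesz--style mean ergodic proof: decompose $\mathcal{H}=I\oplus I^{\perp}$, observe that the coboundaries $(U_{y}-\mathbf{1})\phi$ span a dense subspace of $I^{\perp}$, and then control $P^{(L)}$ on such vectors by a telescoping/Van Hove boundary estimate. The paper instead invokes the spectral theorem for the commuting unitaries $U_{1},\ldots,U_{d}$, represents $P^{(L)}$ as $\int_{\mathrm{T}_{d}}f_{L}\,\mathrm{d}P$ with $f_{L}(z)=|\Lambda_{L}|^{-1}\sum_{x\in\Lambda_{L}}z^{x}$, and reads off both the strong convergence and the identity $PP^{(L)}=P^{(L)}P=P$ from the pointwise limit $f_{L}\to\mathbf{1}_{\{(1,\ldots,1)\}}$ and the algebraic relation $f_{L}f_{\infty}=f_{\infty}$. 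Your approach is more elementary in that it avoids the joint spectral calculus and works verbatim for any F{\o}lner sequence in an amenable group; the spectral approach is slicker for $\mathbb{Z}^{d}$ because it bypasses the combinatorial boundary estimate entirely and makes the projection identity immediate from the functional calculus.
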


\begin{proof}%
The proof of this statement is standard, see, e.g., \cite[Theorem IV.2.2 ]%
{Israel}. It is given here for completeness. Note that the property $%
PP^{(L)}=P^{(L)}P=P$ is, in general, not explicitly given in the versions of
the von Neumann ergodic theorem found in textbooks.

Without loss of generality, assume that $\vec{\ell}=(1,\cdots ,1)$. For any $%
i\in \{1,\ldots ,d\}$, let us consider the unitary operators $%
U_{i}:=U_{(\delta _{i,1},\dots ,\delta _{i,d})}$ with $\delta _{i,j}=0$ for
any $i\neq j$ and $\delta _{i,i}=1$. Since $\mathbb{Z}^{d}$ is abelian, the
normal operators $U_{i}$ for $i\in \{1,\ldots ,d\}$ commute with each other.
Their joint spectrum is contained in the $d$--dimensional torus%
\begin{equation*}
\mathrm{T}_{d}:=\{(z_{1},\ldots ,z_{d})\in \mathbb{C}^{d}\,:\,|z_{i}|=1,%
\,i=1,\ldots ,d\}
\end{equation*}%
and the spectral theorem \cite[Chap. 6, Sect. 5]{spectral thm} ensures the
existence of a projection--valued measure $\mathrm{d}P$ on the torus $%
\mathrm{T}_{d}$ such that%
\begin{equation}
P^{(L)}=\int_{\mathrm{T}_{d}}f_{L}(z_{1},\ldots ,z_{d})\mathrm{d}%
P(z_{1},\ldots ,z_{d})  \label{lim.proj}
\end{equation}%
for any $L\in \mathbb{N}$, where%
\begin{equation*}
f_{L}(z_{1},\ldots ,z_{d}):=\frac{1}{|\Lambda _{L}|}\sum\limits_{(x_{1},%
\ldots ,x_{d})\in \Lambda _{L}}z_{1}^{x_{1}}\cdots z_{d}^{x_{d}}.
\end{equation*}%
Observe that $f_{L}$ converges point--wise as $L\rightarrow \infty $ to the
characteristic function of the set $\{(1,\ldots ,1)\}\subseteq \mathrm{T}%
_{d} $, i.e.,%
\begin{equation}
f_{\infty }(z_{1},\ldots ,z_{d}):=\lim\limits_{L\rightarrow \infty
}f_{L}(z_{1},\ldots ,z_{d})=\left\{ 
\begin{array}{l}
1\quad \mathrm{if\ }(z_{1},\ldots ,z_{d})=(1,\ldots ,1). \\ 
0\quad \mathrm{else.}%
\end{array}%
\right.  \label{lim.proj1}
\end{equation}%
Hence, from (\ref{lim.proj}), the operator $P^{(L)}$ converges strongly to%
\begin{equation}
P^{(\infty )}:=\int_{\mathrm{T}_{d}}f_{\infty }(z_{1},\ldots ,z_{d})\mathrm{d%
}P(z_{1},\ldots ,z_{d}).  \label{lim.proj2}
\end{equation}%
Note that the operator $P^{(\infty )}$ is an orthogonal projection because
of (\ref{lim.proj1})--(\ref{lim.proj2}). Additionally, $I\subseteq
P^{(\infty )}(\mathcal{H})$ by definition and $P^{(\infty )}(\mathcal{H}%
)\subseteq I$ by using (\ref{lim.proj})--(\ref{lim.proj2}) combined with $%
U_{x+y}=U_{y}U_{x}$ for any $x,y\in \mathbb{Z}^{d}$. Therefore, $P^{(\infty
)}=P$ and from (\ref{lim.proj2}) together with $f_{L}f_{\infty }=f_{\infty }$
we deduce that $PP^{(L)}=P^{(L)}P=P$ for any $L\in \mathbb{N}$. 
\end{proof}%

For any $\rho \in E_{\vec{\ell}}$ with GNS\ representation $(\mathcal{H}%
_{\rho },\pi _{\rho },\Omega _{\rho })$, we define $P_{\rho }$ to be the
strong limit of contractions $P^{(L)}$ defined in Theorem \ref{vonN} w.r.t.
the unitary operators $\{U_{x}\}_{x\in \vec{\ell}\cdot \mathbb{Z}^{d}}$ on
the Hilbert space $\mathcal{H}_{\rho }$ of Theorem \ref{GNS} for $G=(\mathbb{%
Z}_{\vec{\ell}}^{d},+)$. By using the projection $P_{\rho }$ and the
equality $\Omega _{\rho }=P_{\rho }\Omega _{\rho }$, it is then easy to
check that all $\vec{\ell}$--periodic states $\rho \in E_{\vec{\ell}}$ are
even (see, e.g., \cite[Example 5.2.21]{BrattelliRobinson}):

\begin{corollary}[$\vec{\ell}$--periodic states are even]
\label{coro.even copy(1)}\mbox{ }\newline
\index{Even!states}Let $\rho \in E_{%
\vec{\ell}}$ with GNS\ representation $(\mathcal{H}_{\rho },\pi _{\rho
},\Omega _{\rho })$. Then, for all odd elements $A\in \mathcal{U}$, $P_{\rho
}\pi _{\rho }(A)P_{\rho }=0$.
\end{corollary}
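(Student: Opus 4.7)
By continuity of $\sigma_\pi$ and density of $\mathcal{U}_0$ in $\mathcal{U}$, odd local elements are norm--dense in the closed subspace of odd elements of $\mathcal{U}$, so it suffices to treat $A\in \mathcal{U}_\Lambda$ with $\sigma_\pi(A)=-A$ for some $\Lambda \in \mathcal{P}_f(\mathfrak{L})$. Let $R$ denote the diameter of $\Lambda$ and introduce the space--average
\[
A_L := \frac{1}{N_L}\sum_{x\in \Lambda_L\cap \mathbb{Z}_{\vec{\ell}}^d}\alpha_x(A),\qquad N_L:=|\Lambda_L\cap \mathbb{Z}_{\vec{\ell}}^d|.
\]
The whole argument reduces to showing that $\|A_L\|\to 0$ as $L\to \infty$. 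Granting this, the conclusion follows in one step: for $\phi,\psi$ in the range of $P_\rho$, the invariance $U_x\phi=\phi$, $U_x\psi=\psi$ combined with Theorem \ref{GNS} gives
\[
\langle\psi,\pi_\rho(A)\phi\rangle = \langle\psi,\pi_\rho(\alpha_x(A))\phi\rangle
\quad\text{for every } x\in \mathbb{Z}_{\vec{\ell}}^d,
\]
so after averaging $\langle\psi,\pi_\rho(A)\phi\rangle = \langle\psi,\pi_\rho(A_L)\phi\rangle$, the right side being bounded in modulus by $\|A_L\|\,\|\psi\|\,\|\phi\|\to 0$. Hence $\langle\psi,\pi_\rho(A)\phi\rangle = 0$ on the range of $P_\rho$, which is $P_\rho\pi_\rho(A)P_\rho = 0$.

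\textbf{Key estimate.} The proof that $\|A_L\|\to 0$ rests on the following mechanism. Expand
\[
A_L^*A_L + A_L A_L^* \;=\; \frac{1}{N_L^2}\sum_{x,y\in\Lambda_L\cap\mathbb{Z}_{\vec{\ell}}^d}\bigl[\alpha_x(A^*)\alpha_y(A) + \alpha_x(A)\alpha_y(A^*)\bigr],
\]
and split the double sum according to whether $|x-y|\le R$ or $|x-y|>R$. For $|x-y|>R$ the elements $\alpha_x(A^*)$ and $\alpha_y(A)$ (and the swapped pair) are odd with disjoint supports, so by the CAR they anticommute. After relabelling $x\leftrightarrow y$ in the first summand, each long--distance contribution becomes an anticommutator
\[
\alpha_y(A^*)\alpha_x(A) + \alpha_x(A)\alpha_y(A^*) = \{\alpha_y(A^*),\alpha_x(A)\} = 0,
\]
so the entire long--distance part of the double sum vanishes. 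The remaining short--distance piece $|x-y|\le R$ has at most $cR^d N_L$ terms, each bounded by $2\|A\|^2$, yielding
\[
\|A_L^*A_L + A_L A_L^*\| = \mathcal{O}(R^d/N_L)\xrightarrow[L\to\infty]{}0.
\]
Since $A_L^*A_L$ and $A_L A_L^*$ are both positive operators, $\|A_L^*A_L\|\le \|A_L^*A_L + A_L A_L^*\|$, and therefore $\|A_L\|^2 = \|A_L^*A_L\|\to 0$ as desired.

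\textbf{Main obstacle.} Neither $A_L^*A_L$ nor $A_L A_L^*$ alone exhibits an obvious cancellation: a naive splitting of $A_L^*A_L$ into short-- and long--distance parts only produces, after anticommutation, the relation $A_L^*A_L \approx -A_L A_L^*$ modulo $\mathcal{O}(R^d/N_L)$, which is not directly useful. The trick is to work with the sum $A_L^*A_L + A_L A_L^*$ and exploit its symmetry under $(x,y)\leftrightarrow(y,x)$ in order to turn each long--distance summand into a vanishing CAR--anticommutator of two odd operators with disjoint supports. Recognising this combinatorial symmetrisation as the right device, together with the positivity trick $\|A_L^*A_L\|\le \|A_L^*A_L+A_L A_L^*\|$, is the main (and essentially only) non--trivial ingredient.
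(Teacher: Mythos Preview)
Your proof is correct and takes a genuinely different route from the paper's. The paper argues entirely inside the GNS representation: it inserts the strong limit $P_\rho = \lim_L P^{(L)}$ from the von Neumann ergodic theorem between $\pi_\rho(A^*)$ and $\pi_\rho(A)$, expands $P^{(L)}$ as an average of unitaries, and then uses the anticommutator identity $A^*\alpha_x(A)+\alpha_x(A)A^*\to 0$ together with positivity of $(P_\rho\pi_\rho(A)^*P_\rho)(P_\rho\pi_\rho(A)P_\rho)$ and $(P_\rho\pi_\rho(A)P_\rho)(P_\rho\pi_\rho(A)^*P_\rho)$ to conclude. You instead establish the purely $C^{*}$--algebraic, state--independent fact that $\|A_L\|\to 0$ for any \emph{local} odd $A$, and then the GNS step becomes a one--liner using only the invariance of vectors in the range of $P_\rho$. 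What you gain is that the von Neumann ergodic theorem is never invoked and you obtain a stronger intermediate statement (the norm convergence $A_L\to 0$ holds in $\mathcal{U}$ itself, not merely weakly in some representation). What the paper's approach buys is that it fits seamlessly with the machinery already set up in Theorems~\ref{GNS} and~\ref{vonN} and does not require the density reduction to local odd elements. Both rely on the same two ingredients: the vanishing of anticommutators of odd elements with disjoint supports, and the positivity trick allowing one to bound a single positive summand by the norm of a sum of positive operators.
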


\begin{proof}%
Since $\rho $ is $\vec{\ell}$--periodic, by Theorem \ref{GNS}, there are
unitary operators $\{U_{x}\}_{x\in \mathbb{Z}_{\vec{\ell}}^{d}}$ acting on $%
\mathcal{H}_{\rho }$ and defining a representation of $(\mathbb{Z}_{\vec{\ell%
}}^{d},+)$ such that $U_{x}\Omega _{\rho }=\Omega _{\rho }$ and $\pi _{\rho
}(\alpha _{x}(A))=U_{x}\pi _{\rho }(A)U_{x}^{\ast }$ for all $x\in \mathbb{Z}%
_{\vec{\ell}}^{d}$. The $\ast $--automorphism $\alpha _{x}$ is defined by (%
\ref{transl}). If $A\in \mathcal{U}$ is odd, i.e., $\sigma _{\pi }(A)=-A$
(cf. (\ref{definition of gauge})), then%
\begin{equation*}
\lim\limits_{|x|\rightarrow \infty }(A^{\ast }\alpha _{x}(A)+\alpha
_{x}(A)A^{\ast })=0.
\end{equation*}%
Consequently, by using Theorem \ref{vonN}%
\index{von Neumann ergodic theorem} and observing that $U_{x}P_{\rho
}=P_{\rho }U_{x}=P_{\rho }$, for any $x\in 
\vec{\ell}\cdot \mathbb{Z}^{d}$, 
\begin{equation*}
(P_{\rho }\pi _{\rho }(A)^{\ast }P_{\rho })(P_{\rho }\pi _{\rho }(A)P_{\rho
})+(P_{\rho }\pi _{\rho }(A)P_{\rho })(P_{\rho }\pi _{\rho }(A)^{\ast
}P_{\rho })=0.
\end{equation*}%
Both terms on the l.h.s. of the last equality are positive. Therefore, if $%
A\in \mathcal{U}$ is odd then $P_{\rho }\pi _{\rho }(A)P_{\rho }=0$.%
\end{proof}%

The set $E_{\vec{\ell}}$ is clearly convex, weak$^{\ast }$--compact, and
also metrizable, by Theorem \ref{Metrizability}. By using the Choquet theorem%
\index{Choquet theorem} (Theorem \ref{theorem choquet bis}), each state $%
\rho \in E_{%
\vec{\ell}}$ has a decomposition in terms of states in the (non--empty) set $%
\mathcal{E}_{\vec{\ell}}$ of extreme points of $E_{\vec{\ell}}$. The Choquet
decomposition is, generally, not unique. However, in the particular case of
the convex set $E_{\vec{\ell}}$ the uniqueness of this decomposition follows
from the von Neumann ergodic theorem (Theorem \ref{vonN}):

\begin{lemma}[Uniqueness of the Choquet decomposition in $E_{\vec{\ell}}$]
\label{choquet unique}\mbox{ }\newline
For any $\rho \in E_{\vec{\ell}}$, the probability measure $\mu _{\rho }$
given by Theorem \ref{theorem choquet bis} is unique and norm preserving in
the sense that $\Vert \rho -\rho ^{\prime }\Vert =\Vert \mu _{\rho }-\mu
_{\rho ^{\prime }}\Vert $ for any $\rho ,\rho ^{\prime }\in E_{\vec{\ell}}$.
Here, $\Vert \rho -\rho ^{\prime }\Vert $ and $\Vert \mu _{\rho }-\mu _{\rho
^{\prime }}\Vert $ stand for the norms of $(\rho -\rho ^{\prime })$ and $%
(\mu _{\rho }-\mu _{\rho ^{\prime }})$ seen as linear functionals.
\end{lemma}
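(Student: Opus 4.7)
The plan is to deduce both assertions from the combination of: (i) Corollary \ref{coro.even copy(1)}, which shows that in the GNS triple $(\mathcal{H}_{\rho},\pi_{\rho},\Omega_{\rho})$ of any $\rho\in E_{\vec{\ell}}$ the compressed image $P_{\rho}\pi_{\rho}(A)P_{\rho}$ of every odd element vanishes, so that $\rho$ is entirely determined by its values on the even subalgebra $\mathcal{U}^{+}$; (ii) the $\mathbb{Z}_{\vec{\ell}}^{d}$--asymptotic abelianess of $\mathcal{U}^{+}$ (cf.\ (\ref{asymtptic abelian})); and (iii) the von Neumann ergodic theorem (Theorem \ref{vonN}) applied to the unitaries $\{U_{x}\}_{x\in\mathbb{Z}_{\vec{\ell}}^{d}}$ of Theorem \ref{GNS}. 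This is the general mechanism of \cite[Corollary~4.3.11.]{BrattelliRobinsonI} tailored to the CAR setting, where evenness must be used to kill the odd part that would otherwise obstruct commutation at infinity.

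For uniqueness, I would first show that the $*$--algebra
\begin{equation*}
\mathcal{A}_{\rho}:=P_{\rho}\pi_{\rho}(\mathcal{U}^{+})P_{\rho}\subseteq\mathcal{B}(P_{\rho}\mathcal{H}_{\rho})
\end{equation*}
is abelian. Given $A,B\in\mathcal{U}^{+}$, Theorem \ref{vonN} together with $P_{\rho}U_{x}=U_{x}P_{\rho}=P_{\rho}$ yields
\begin{equation*}
P_{\rho}\pi_{\rho}(A)P_{\rho}\pi_{\rho}(B)P_{\rho}=\lim_{L\to\infty}\frac{1}{|\Lambda_{L}\cap\mathbb{Z}_{\vec{\ell}}^{d}|}\sum_{x\in\Lambda_{L}\cap\mathbb{Z}_{\vec{\ell}}^{d}}P_{\rho}\pi_{\rho}(A\,\alpha_{x}(B))P_{\rho},
\end{equation*}
and the analogous formula with $A,B$ exchanged; asymptotic abelianess of $\mathcal{U}^{+}$ makes these two limits coincide. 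By Corollary \ref{coro.even copy(1)}, $P_{\rho}\pi_{\rho}(\mathcal{U})P_{\rho}=\mathcal{A}_{\rho}$, so $\Omega_{\rho}$ is cyclic and separating for the abelian von Neumann algebra $\mathcal{A}_{\rho}''$. The standard correspondence between orthogonal measures on the state space carried by $\mathcal{E}_{\vec{\ell}}$ and projection--valued partitions of unity inside an abelian von Neumann algebra (\cite[Theorem~4.1.25 / 4.2.10]{BrattelliRobinsonI}) then identifies $\mu_{\rho}$ with the unique spectral measure of $\mathcal{A}_{\rho}''$ in the vector state of $\Omega_{\rho}$, yielding uniqueness.

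For the norm--preserving identity, the key point is that this orthogonal--measure interpretation makes the decomposing measures \emph{subcentral}: distinct ergodic states in $\mathcal{E}_{\vec{\ell}}$ are mutually disjoint as states on $\mathcal{U}$. Given $\rho,\rho'\in E_{\vec{\ell}}$, I would apply the Hahn--Jordan decomposition $\mu_{\rho}-\mu_{\rho'}=\lambda_{+}-\lambda_{-}$ with $\lambda_{\pm}$ mutually singular positive measures on $\mathcal{E}_{\vec{\ell}}$, and define positive functionals
\begin{equation*}
\varrho_{\pm}:=\int_{\mathcal{E}_{\vec{\ell}}}\hat{\rho}\,\mathrm{d}\lambda_{\pm}(\hat{\rho})\in\mathcal{U}^{*},\qquad \|\varrho_{\pm}\|=\lambda_{\pm}(\mathcal{E}_{\vec{\ell}}).
\end{equation*}
Disjointness lifts from $\lambda_{+}\perp\lambda_{-}$ to $\varrho_{+}\perp\varrho_{-}$ as states, so $\|\varrho_{+}-\varrho_{-}\|=\|\varrho_{+}\|+\|\varrho_{-}\|=\|\mu_{\rho}-\mu_{\rho'}\|$, and by construction $\rho-\rho'=\varrho_{+}-\varrho_{-}$. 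The reverse inequality $\|\mu_{\rho}-\mu_{\rho'}\|\leq\|\rho-\rho'\|$ is immediate from the fact that the map $\mu\mapsto\int\hat{\rho}\,\mathrm{d}\mu$ is a norm contraction from $M(\mathcal{E}_{\vec{\ell}})$ into $\mathcal{U}^{*}$.

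The main technical obstacle is the disjointness transfer in the last paragraph: since $\mathcal{E}_{\vec{\ell}}$ is uncountable and $\mu_{\rho}$ need not be atomic, one cannot simply sum over atoms and must invoke the disintegration/orthogonal--measure machinery of \cite[Sect.~4.1--4.2]{BrattelliRobinsonI} to conclude that the GNS representations of $\varrho_{+}$ and $\varrho_{-}$ are disjoint. Once this is in hand, everything else is bookkeeping.
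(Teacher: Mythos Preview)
Your approach is correct but differs substantially from the paper's. For uniqueness, the paper follows Israel's concrete route: it uses Stone--Weierstrass to reduce the question to determining the moments $\mu_{\rho}(\hat{A}_{1}\cdots\hat{A}_{n})$ for $A_{1},\ldots,A_{n}\in\mathcal{U}$, and then shows directly---via the von Neumann ergodic theorem and the one--dimensionality of $P_{\hat{\rho}}$ for extreme $\hat{\rho}$ (Lemma~\ref{lemma extremal.ergodic})---that these moments equal $\lim_{L}\rho((A_{1})_{L,\vec{\ell}}\cdots(A_{n})_{L,\vec{\ell}})$, hence are uniquely fixed by $\rho$. You instead invoke the abstract orthogonal--measure/abelian--subalgebra correspondence of \cite{BrattelliRobinsonI}, which is the $G$--abelian simplex mechanism; this is perfectly valid and is in fact the machinery behind \cite[Corollary~4.3.11]{BrattelliRobinsonI} cited in the paper, but your sketch compresses the passage from ``$P_{\rho}\pi_{\rho}(\mathcal{U})P_{\rho}$ abelian'' to ``the maximal measure is the unique orthogonal one carried by $\mathcal{E}_{\vec{\ell}}$'' rather heavily. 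Note that your abelianness computation is essentially Lemma~\ref{vonN.abelian} of the paper. For the isometry, the paper simply cites \cite[Theorem~IV.4.1]{Israel}, whereas you unpack the argument via Hahn--Jordan and disjointness transfer; your version is more informative.

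One small slip: in your last sentence you claim that the contraction property of $\mu\mapsto\int\hat{\rho}\,\mathrm{d}\mu$ gives $\Vert\mu_{\rho}-\mu_{\rho'}\Vert\leq\Vert\rho-\rho'\Vert$, but a contraction yields the opposite inequality $\Vert\rho-\rho'\Vert\leq\Vert\mu_{\rho}-\mu_{\rho'}\Vert$. This is harmless, since your disjointness argument already produces equality outright and the sentence is redundant; but if you want the genuinely easy direction, it is $\Vert\rho-\rho'\Vert\leq\Vert\mu_{\rho}-\mu_{\rho'}\Vert$, and the hard direction (requiring disjointness) is the reverse.
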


\begin{proof}
Observe that the map $\rho \mapsto \mu _{\rho }$ is norm preserving, by \cite%
[Theorem IV.4.1]{Israel}. See also \cite[Corollary IV.4.2]{Israel} for the
special case of spin systems. To prove the uniqueness of $\mu _{\rho }$, we
adapt here the proof given in \cite[Theorem IV.3.3]{Israel} for quantum spin
systems to our case of Fermi systems. For all $A\in \mathcal{U}$, let the
(affine) weak$^{\ast }$--continuous map%
\begin{equation*}
\rho \mapsto \hat{A}(\rho ):=\rho (A)
\end{equation*}%
from the set $E_{\vec{\ell}}$ to $\mathbb{C}$. The family $\{\hat{A}\}_{A\in 
\mathcal{U}}$ of continuous functionals separates states, i.e., for all $%
\rho ,\rho ^{\prime }\in E_{\vec{\ell}}$ with $\rho \not=\rho ^{\prime }$,
there is $A\in \mathcal{U}$ such that $\hat{A}(\rho )\not=\hat{A}(\rho
^{\prime })$. Thus, by the Stone--Weierstrass theorem, the uniqueness of the
probability measure $\mu _{\rho }$ of Theorem \ref{theorem choquet bis} is
equivalent to the uniqueness of the complex numbers 
\begin{equation}
\mu _{\rho }(\hat{A}_{1}\cdots \hat{A}_{n})=\int_{E_{\vec{\ell}}}\mathrm{d}%
\mu _{\rho }(\hat{\rho})\ \hat{\rho}(A_{1})\cdots \hat{\rho}(A_{n}),\quad
A_{1},\ldots ,A_{n}\in \mathcal{U},\;n\in \mathbb{N}.
\label{int choquet unique}
\end{equation}%
By the von Neumann ergodic theorem%
\index{von Neumann ergodic theorem} (Theorem \ref{vonN}), for any $\rho \in
E_{%
\vec{\ell}}$, $A_{1},\ldots ,A_{n}\in \mathcal{U}$ and $n\in \mathbb{N}$, 
\begin{equation*}
\lim\limits_{L\rightarrow \infty }\rho \left( (A_{1})_{L,\vec{\ell}}\cdots
(A_{n})_{L,\vec{\ell}}\right) =\langle \Omega _{\rho },\pi _{\rho
}(A_{1})P_{\rho }\pi _{\rho }(A_{2})P_{\rho }\cdots P_{\rho }\pi _{\rho
}(A_{n})\Omega _{\rho }\rangle .
\end{equation*}%
Recall that $A_{L,\vec{\ell}}$ is defined by (\ref{definition de A L}) for
any $A\in \mathcal{U}$, $L\in \mathbb{N}$, and any $\vec{\ell}\in \mathbb{N}%
^{d}$. By Lemma \ref{lemma extremal.ergodic} below, the projection $P_{\rho
} $ is one--dimensional with $\mathrm{ran\;}P_{\rho }=%
\mathbb{C}
\Omega _{\rho }$ whenever $\rho \in \mathcal{E}_{\vec{\ell}}$ is extreme in $%
E_{\vec{\ell}}$. In particular, for all extreme states $\hat{\rho}\in 
\mathcal{E}_{\vec{\ell}}$ and all $A_{1},\ldots ,A_{n}\in \mathcal{U}$, $%
n\in \mathbb{N}$, 
\begin{equation}
\hat{\rho}(A_{1})\cdots \hat{\rho}(A_{n})=\lim\limits_{L\rightarrow \infty }%
\hat{\rho}\left( (A_{1})_{L,\vec{\ell}}\cdots (A_{n})_{L,\vec{\ell}}\right) .
\label{ergodicity eq}
\end{equation}%
Hence, as $\mu _{\rho }(E_{\vec{\ell}}\backslash \mathcal{E}_{\vec{\ell}})=0$
(Theorem \ref{theorem choquet bis}), by using (\ref{int choquet unique})
together with Lebesgue's dominated convergence, it follows that, for any $%
A_{1},\ldots ,A_{n}\in \mathcal{U}$ with $n\in \mathbb{N}$, the complex
number%
\begin{eqnarray*}
\mu _{\rho }(\hat{A}_{1}\cdots \hat{A}_{n}) &=&\lim\limits_{L\rightarrow
\infty }\int_{E_{\vec{\ell}}}\mathrm{d}\mu _{\rho }\left( \hat{\rho}\right)
\ \hat{\rho}\left( (A_{1})_{L,\vec{\ell}}\cdots (A_{n})_{L,\vec{\ell}}\right)
\\
&=&\lim\limits_{L\rightarrow \infty }\rho \left( (A_{1})_{L,\vec{\ell}%
}\cdots (A_{n})_{L,\vec{\ell}}\right)
\end{eqnarray*}%
is uniquely determined. 
\end{proof}%

\noindent As a consequence, the set $E_{\vec{\ell}}$ is a (Choquet) simplex,
see Definition \ref{gamm regularisation copy(1)} and Theorem \ref{theorem
choquet bis copy(1)}.

\section{The set $\mathcal{E}_{\vec{\ell}}$ of extreme states of $E_{\vec{%
\ell}}$\label{Section theorem ergodic extremalbis}%
\index{States!extreme}%
\index{States!ergodic}}

We want to prove next that all extreme states are ergodic w.r.t. the
space--average (\ref{definition de A L}) (Definition \ref{def:egodic}) and
conversely. The fact that all ergodic states are extreme is not difficult to
verify:

\begin{lemma}[Ergodicity implies extremality]
\label{ergodic.extremal}\mbox{ }\newline
Any ergodic state $\rho \in E_{%
\vec{\ell}}$ is extreme in $E_{\vec{\ell}}$, i.e., $\rho \in \mathcal{E}_{%
\vec{\ell}}$.
\end{lemma}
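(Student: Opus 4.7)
The plan is to argue by contradiction: suppose $\rho \in E_{\vec{\ell}}$ is ergodic but decomposes nontrivially as $\rho = \lambda \rho_1 + (1-\lambda)\rho_2$ with $\rho_1, \rho_2 \in E_{\vec{\ell}}$ and $\lambda \in (0,1)$, and derive $\rho_1(A) = \rho_2(A)$ for every $A \in \mathcal{U}$.

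The central identity I would exploit is the chain of (in)equalities, valid for each $A \in \mathcal{U}$:
\begin{equation*}
|\rho(A)|^{2} \;=\; \Delta_{A,\vec{\ell}}(\rho) \;=\; \lambda\,\Delta_{A,\vec{\ell}}(\rho_{1}) + (1-\lambda)\,\Delta_{A,\vec{\ell}}(\rho_{2}) \;\geq\; \lambda |\rho_{1}(A)|^{2} + (1-\lambda)|\rho_{2}(A)|^{2}.
\end{equation*}
The first equality is ergodicity (Definition \ref{def:egodic}). The second uses that $\Delta_{A,\vec{\ell}}$ is affine on $E_{\vec{\ell}}$, which is assertion (i) of Theorem \ref{Lemma1.vonN copy(4)}. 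The inequality comes from applying the Cauchy--Schwarz inequality $\rho_{i}(B^{\ast}B) \geq |\rho_{i}(B)|^{2}$ to $B = A_{L,\vec{\ell}}$, noting that by $\vec{\ell}$-periodicity one has $\rho_{i}(A_{L,\vec{\ell}}) = \rho_{i}(A)$ for every $L$, and then passing to the limit $L \to \infty$ (using (\ref{delta bounded}) to guarantee the limit exists for each $\rho_i$).

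On the other hand, by convexity of $z \mapsto |z|^{2}$ on $\mathbb{C}$ applied to $\rho(A) = \lambda \rho_1(A) + (1-\lambda)\rho_2(A)$, I get the reverse inequality
\begin{equation*}
|\rho(A)|^{2} \;\leq\; \lambda|\rho_{1}(A)|^{2} + (1-\lambda)|\rho_{2}(A)|^{2},
\end{equation*}
so both inequalities are equalities. The standard identity
\begin{equation*}
\lambda|z_{1}|^{2}+(1-\lambda)|z_{2}|^{2} - |\lambda z_{1}+(1-\lambda)z_{2}|^{2} = \lambda(1-\lambda)|z_{1}-z_{2}|^{2}
\end{equation*}
with $z_i = \rho_i(A)$ and $\lambda \in (0,1)$ then forces $\rho_{1}(A) = \rho_{2}(A)$. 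Since $A \in \mathcal{U}$ was arbitrary, $\rho_1 = \rho_2 = \rho$, so $\rho$ is extreme.

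There is no real obstacle here; the only delicate point is confirming that $\Delta_{A,\vec{\ell}}(\rho_i)$ is well-defined for the components $\rho_i \in E_{\vec{\ell}}$ (which is ensured by Definition \ref{definition de deltabis} together with the bound (\ref{delta bounded})) and that $\Delta_{A,\vec{\ell}}$ is indeed affine, so that the decomposition passes through the limit cleanly; both are already recorded in Theorem \ref{Lemma1.vonN copy(4)}. Everything else is Cauchy--Schwarz and the strict convexity of $|\,\cdot\,|^{2}$.
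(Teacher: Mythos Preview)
Your proof is correct and follows essentially the same approach as the paper's: both combine the affinity of $\Delta_{A,\vec{\ell}}$, the Cauchy--Schwarz lower bound $\Delta_{A,\vec{\ell}}(\rho_i)\geq |\rho_i(A)|^{2}$, and the strict convexity of $|\cdot|^{2}$, differing only in that the paper argues by contrapositive (non--extreme implies non--ergodic) and derives the lower bound via the GNS representation (equation (\ref{ergodic eqbis})) rather than citing Theorem~\ref{Lemma1.vonN copy(4)}. One small caveat: Theorem~\ref{Lemma1.vonN copy(4)} is literally stated for $\Delta_{A}=\Delta_{A,(1,\dots,1)}$ rather than $\Delta_{A,\vec{\ell}}$, but the affinity of $\Delta_{A,\vec{\ell}}$ is immediate from its definition as a limit of linear functionals of $\rho$, so this is only a referencing nicety, not a gap.
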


\begin{proof}
If $\rho \notin \mathcal{E}_{\vec{\ell}}$ is not extreme, there are two
states $\rho _{1},\rho _{2}\in E_{\vec{\ell}}$ with $\rho =\frac{1}{2}\rho
_{1}+\frac{1}{2}\rho _{2}$ and $\rho _{1}(A)\not=\rho _{2}(A)$ for some $%
A=A^{\ast }\in \mathcal{U}$. Then%
\begin{equation}
|\rho (A)|^{2}<\frac{1}{2}|\rho _{1}(A)|^{2}+\frac{1}{2}|\rho _{2}(A)|^{2}.
\label{inequality debile0}
\end{equation}%
For all $\vec{\ell}\in \mathbb{N}^{d}$ and any state $\rho \in E_{\vec{\ell}%
} $ with GNS\ representation $(\mathcal{H}_{\rho },\pi _{\rho },\Omega
_{\rho })$, by Theorem \ref{GNS} for $G=(\mathbb{Z}_{\vec{\ell}}^{d},+)$ and
Theorem \ref{vonN}%
\index{von Neumann ergodic theorem}, we get%
\begin{equation}
\Delta _{A,%
\vec{\ell}}\left( \rho \right) :=\lim\limits_{L\rightarrow \infty }\rho
(A_{L,\vec{\ell}}^{\ast }A_{L,\vec{\ell}})=\underset{L\rightarrow \infty }{%
\lim }\Vert P_{\rho }^{(L)}\pi _{\rho }(A)\Omega _{\rho }\Vert ^{2}=\Vert
P_{\rho }\pi _{\rho }(A)\Omega _{\rho }\Vert ^{2}.  \label{ergodic eqbis}
\end{equation}%
Using Cauchy--Schwarz inequality together with $P_{\rho }\Omega _{\rho
}=\Omega _{\rho }$ (Theorem \ref{vonN}),%
\begin{equation*}
|\rho (A)|^{2}=|\langle \Omega _{\rho },P_{\rho }\pi _{\rho }(A)\Omega
_{\rho }\rangle |^{2}\leq \Vert P_{\rho }\pi _{\rho }(A)\Omega _{\rho }\Vert
^{2}=\Delta _{A,\vec{\ell}}\left( \rho \right)
\end{equation*}%
for any state $\rho \in E_{\vec{\ell}}$. Applying the last inequality to
states $\rho _{1}$ and $\rho _{2}$ we conclude from (\ref{inequality debile0}%
) that%
\begin{equation*}
|\rho (A)|^{2}<\frac{1}{2}\Delta _{A,\vec{\ell}}\left( \rho _{1}\right) +%
\frac{1}{2}\Delta _{A,\vec{\ell}}\left( \rho _{2}\right) =\Delta _{A,\vec{%
\ell}}\left( \rho \right) .
\end{equation*}%
It follows that $\rho \notin \mathcal{E}_{\vec{\ell}}$ is not ergodic. 
\end{proof}%

The last lemma is elementary, but it implies an essential topological
property of the set $\mathcal{E}_{\vec{\ell}}$ of extreme points of the
convex and weak$^{\ast }$--compact set $E_{\vec{\ell}}$:

\begin{corollary}[Density of the set $\mathcal{E}_{\vec{\ell}}$ of extreme
points of $E_{\vec{\ell}}$]
\label{lemma density of extremal points}\mbox{ }\newline
For any $\vec{\ell}\in \mathbb{N}^{d}$, the set $\mathcal{E}_{\vec{\ell}}$
is a $G_{\delta }$ weak$^{\ast }$--dense subset of $E_{\vec{\ell}}$.
\end{corollary}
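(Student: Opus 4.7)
The proof has two independent parts: weak$^{\ast }$--density of $\mathcal{E}_{\vec{\ell}}$ in $E_{\vec{\ell}}$ and the fact that $\mathcal{E}_{\vec{\ell}}$ is $G_{\delta }$. I would handle the two using quite different tools: the density part rests on asymptotic abelianness of the even subalgebra together with a classical result about invariant states, while the $G_{\delta }$ part uses the upper semi--continuity of the space--averaging functional and separability of $\mathcal{U}$.

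For the $G_{\delta }$ part, I would characterize ergodic states via Theorem \ref{theorem ergodic extremal} and Definition \ref{def:egodic} as
\begin{equation*}
\mathcal{E}_{\vec{\ell}}=\{\rho \in E_{\vec{\ell}}:\Delta _{A,\vec{\ell}}(\rho )=|\rho (A)|^{2}\text{ for all }A\in \mathcal{U}\}.
\end{equation*}
Fix a countable norm--dense sequence $\{A_{n}\}_{n\in \mathbb{N}}\subseteq \mathcal{U}$, which exists by separability of $\mathcal{U}$, and set $g_{n}(\rho ):=\Delta _{A_{n},\vec{\ell}}(\rho )-|\rho (A_{n})|^{2}$. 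The Cauchy--Schwarz bound $|\rho (A)|^{2}\leq \Delta _{A,\vec{\ell}}(\rho )$ obtained in the proof of Lemma \ref{ergodic.extremal} gives $g_{n}\geq 0$ on $E_{\vec{\ell}}$. Since $\Delta _{A_{n},\vec{\ell}}$ is weak$^{\ast }$--upper semi--continuous by Theorem \ref{Lemma1.vonN copy(4)} (i) and $\rho \mapsto |\rho (A_{n})|^{2}$ is weak$^{\ast }$--continuous, each $g_{n}$ is weak$^{\ast }$--upper semi--continuous, and hence $\{g_{n}<1/m\}$ is weak$^{\ast }$--open in $E_{\vec{\ell}}$ for every $m\in \mathbb{N}$. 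Combining this with the local Lipschitz continuity in $A$ (Theorem \ref{Lemma1.vonN copy(4)} (ii)) to pass from the countable subset back to all of $\mathcal{U}$, one concludes that
\begin{equation*}
\mathcal{E}_{\vec{\ell}}=\bigcap_{n,m\in \mathbb{N}}\{\rho \in E_{\vec{\ell}}:g_{n}(\rho )<1/m\},
\end{equation*}
which is $G_{\delta }$ in $E_{\vec{\ell}}$.

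For the density part, I would use Lemma \ref{coro.even}, which says that every $\vec{\ell}$--periodic state is even, to identify $E_{\vec{\ell}}$ (affinely and weak$^{\ast }$--homeomorphically) with the set of $\vec{\ell}$--invariant states on the even subalgebra $\mathcal{U}^{+}$. Because the CAR force any two even local elements in disjoint regions to \emph{commute} (rather than anti-commute), an $\varepsilon /3$--approximation argument using local even elements shows that the $\mathbb{Z}_{\vec{\ell}}^{d}$--action on $\mathcal{U}^{+}$ is asymptotically abelian: $\lim_{|x|\rightarrow \infty }\Vert \lbrack A,\alpha _{x}(B)]\Vert =0$ for all $A,B\in \mathcal{U}^{+}$. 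The classical density result for invariant states of a $G$--asymptotically abelian $C^{\ast }$--system, e.g.\ \cite[Corollary 4.3.11]{BrattelliRobinsonI}, then yields weak$^{\ast }$--density of the extreme $\vec{\ell}$--invariant states on $\mathcal{U}^{+}$, and transferring back via the homeomorphism gives weak$^{\ast }$--density of $\mathcal{E}_{\vec{\ell}}$ in $E_{\vec{\ell}}$.

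The main technical obstacle is the density statement. If a reference--free argument is desired, the explicit construction is to pick, for a given $\rho \in E_{\vec{\ell}}$ and large $L$ with every $\ell _{i}$ dividing $2L+1$, the restriction of $\rho $ to $\mathcal{U}_{\Lambda _{L}}^{+}$, extend it to all of $\mathcal{U}^{+}$ as a product state across the tiles $\Lambda _{L}+(2L+1)y$, $y\in \mathbb{Z}^{d}$ (licit because even elements in disjoint tiles commute), and then symmetrize over $\mathbb{Z}_{\vec{\ell}}^{d}/((2L+1)\mathbb{Z}^{d})$. The resulting state lies in $E_{\vec{\ell}}$, converges weak$^{\ast }$ to $\rho $ because any local observable eventually sits inside one tile, and is ergodic thanks to the product structure together with asymptotic abelianness; verifying the ergodicity criterion $\Delta _{A,\vec{\ell}}(\cdot )=|\cdot (A)|^{2}$ on this construction is the delicate step.
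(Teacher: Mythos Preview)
Your argument is correct but departs from the paper on both halves.

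For the $G_{\delta }$ property the paper takes a one--line route: $E_{\vec{\ell}}$ is weak$^{\ast }$--compact, convex, and metrizable (Theorem \ref{Metrizability}), so its extreme boundary is automatically $G_{\delta }$ by the general fact recorded as Theorem \ref{lemma Milman} (i). Your argument via the explicit characterization $\mathcal{E}_{\vec{\ell}}=\{\rho :\Delta _{A,\vec{\ell}}(\rho )=|\rho (A)|^{2}\text{ for all }A\}$ is valid, but it needs the \emph{full} equivalence of extremality and ergodicity, and in the paper's ordering the direction ``extremal $\Rightarrow $ ergodic'' (Lemma \ref{lemma extremal.ergodic}) is established only after this corollary. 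The dependency can be rearranged since Lemma \ref{lemma extremal.ergodic} does not use the corollary, but this makes your route more involved than necessary.

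For density, the paper does exactly the explicit construction you sketch as the reference--free alternative: restrict $\rho $ to the box $\Lambda _{n,\vec{\ell}}$, view this as a $(2n+1)\vec{\ell}$--periodic state on all of $\mathcal{U}$ via \cite[Theorem 11.2]{Araki-Moriya} (the evenness from Lemma \ref{coro.even} is what makes the product extension well--defined), space--average over $\mathbb{Z}_{\vec{\ell}}^{d}$--translates inside the box, and then verify ergodicity by a direct computation using the product structure across tiles; this is precisely the ``delicate step'' you flag and do not fully carry out. Your primary approach via asymptotic abelianness of $\mathcal{U}^{+}$ and a black--box from \cite{BrattelliRobinsonI} is a legitimate shortcut, though note that \cite[Corollary 4.3.11]{BrattelliRobinsonI} is the simplex statement rather than the density one; the paper points to \cite[pp.~405--406, 464]{BrattelliRobinsonI} for density in the quantum spin case. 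The advantage of the paper's explicit construction is that it is self--contained and, more importantly, the approximating ergodic states $\hat{\rho}_{n}$ are reused verbatim in Lemma \ref{lemma property entropy copy(1)} to prove the weak$^{\ast }$--pseudo--continuity of the entropy density.
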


\begin{proof}
The proof of this lemma is a slight adaptation of the proof of \cite[Lemma
IV.3.2.]{Israel} for quantum spin systems to the case of even states over
the fermion algebra $\mathcal{U}$. It is a \emph{pivotal} proof in the
sequel.

The set $\mathcal{E}_{\vec{\ell}}$ of extreme points of $E_{\vec{\ell}}$ is
a $G_{\delta }$ set, by Theorem \ref{lemma Milman} (i), as $E_{\vec{\ell}}$
is metrizable. Thus, it suffices to prove that $\mathcal{E}_{\vec{\ell}}$ is
dense in $E_{\vec{\ell}}$. For any $\rho \in E_{\vec{\ell}}$, we define the
state $\tilde{\rho}_{n}$ to be the restriction $\rho _{\Lambda _{n}}\in
E_{\Lambda _{n}}$ on the box%
\begin{equation}
\Lambda _{n,\vec{\ell}}:=\left\{ x=(x_{1},\cdots ,x_{d})\in \mathbb{Z}%
^{d}:\left\vert x_{i}\right\vert \leq n\ell _{i}\right\}
\label{equation toto0}
\end{equation}%
seen as a $(2n+1)\vec{\ell}$--periodic state. This is possible, by \cite[%
Theorem 11.2.]{Araki-Moriya}, because any $\vec{\ell}$--periodic state is
even, by Corollary \ref{coro.even copy(1)}. From the state $\tilde{\rho}%
_{n}\in E_{(2n+1)\vec{\ell}}$ we define next the $\vec{\ell}$--periodic
state 
\begin{equation}
\hat{\rho}_{n}:=\frac{1}{|\Lambda _{n,\vec{\ell}}\cap \mathbb{Z}_{\vec{\ell}%
}^{d}|}\sum\limits_{x\in \Lambda _{n,\vec{\ell}}\cap \mathbb{Z}_{\vec{\ell}%
}^{d}}\tilde{\rho}_{n}\circ \alpha _{x}\in E_{\vec{\ell}}.
\label{equation toto}
\end{equation}%
Clearly, the space--averaged state $\hat{\rho}_{n}$ converges towards $\rho
\in E_{\vec{\ell}}$ w.r.t. the weak$^{\ast }$--topology and we prove below
that $\hat{\rho}_{n}\in \mathcal{E}_{\vec{\ell}}$ by using Lemma \ref%
{ergodic.extremal}.

Indeed, for any $A\in \mathcal{U}_{0}$, there is a positive constant $C>0$
such that%
\begin{equation*}
\tilde{\rho}_{n}\left( \alpha _{x}(A^{\ast })\alpha _{y}(A)\right) =\tilde{%
\rho}_{n}\left( \alpha _{x}(A^{\ast })\right) \tilde{\rho}_{n}\left( \alpha
_{y}(A)\right)
\end{equation*}%
whenever $d(x,y)\geq C$. Here, $d:\mathfrak{L}\times \mathfrak{L}\rightarrow
\lbrack 0,\infty )$ is the Euclidean metric defined on the lattice $%
\mathfrak{L}:=\mathbb{Z}^{d}$ by (\ref{def.dist}). Using the space--average $%
A_{L,\vec{\ell}}$ defined by (\ref{definition de A L}) we then deduce that%
\begin{eqnarray}
\tilde{\rho}_{n}(A_{L,\vec{\ell}}^{\ast }A_{L,\vec{\ell}}) &=&\frac{1}{%
|\Lambda _{L}\cap \mathbb{Z}_{\vec{\ell}}^{d}|^{2}}\sum\limits_{x,y\in
\Lambda _{L}\cap \mathbb{Z}_{\vec{\ell}}^{d}}\tilde{\rho}_{n}\left( \alpha
_{x}(A^{\ast })\right) \tilde{\rho}_{n}\left( \alpha _{y}(A)\right)
\label{asymptotics1} \\
&&+\mathcal{O}(L^{-d}).  \notag
\end{eqnarray}%
Since $\hat{\rho}_{n}\in E_{\vec{\ell}}$ is a $\vec{\ell}$--periodic state,
for any $A\in \mathcal{U}_{0}$, one has that%
\begin{equation*}
\frac{1}{|\Lambda _{L}\cap \mathbb{Z}_{\vec{\ell}}^{d}|}\sum\limits_{x\in
\Lambda _{L}\cap \mathbb{Z}_{\vec{\ell}}^{d}}\tilde{\rho}_{n}\left( \alpha
_{x}(A)\right) =\hat{\rho}_{n}\left( A\right) +\mathcal{O}(L^{-1})
\end{equation*}%
which combined with the asymptotics (\ref{asymptotics1}) implies that%
\begin{equation*}
\lim\limits_{L\rightarrow \infty }\tilde{\rho}_{n}(A_{L,\vec{\ell}}^{\ast
}A_{L,\vec{\ell}})=\left\vert \hat{\rho}_{n}\left( A\right) \right\vert ^{2}.
\end{equation*}%
Using this last equality we then obtain from (\ref{equation toto}) that, for
any $A\in \mathcal{U}_{0}$,%
\begin{equation}
\lim\limits_{L\rightarrow \infty }\hat{\rho}_{n}(A_{L,\vec{\ell}}^{\ast
}A_{L,\vec{\ell}})=\left\vert \hat{\rho}_{n}\left( A\right) \right\vert ^{2}
\label{asymptotics2}
\end{equation}%
because $\hat{\rho}_{n}\in E_{\vec{\ell}}$ and 
\begin{equation}
\alpha _{x}(A_{L,\vec{\ell}}^{\ast }A_{L,\vec{\ell}})=\left( \alpha
_{x}\left( A\right) \right) _{L,\vec{\ell}}^{\ast }\left( \alpha _{x}\left(
A\right) \right) _{L,\vec{\ell}}\   \label{asymptotics3}
\end{equation}%
for all $x\in \mathbb{Z}^{d}$. Since the set $\mathcal{U}_{0}$ is dense in
the fermion algebra $\mathcal{U}$, we can extend (\ref{asymptotics2}) to any 
$A\in \mathcal{U}$ which shows that the state $\hat{\rho}_{n}\in E_{\vec{\ell%
}}$ is ergodic and thus extreme in $E_{\vec{\ell}}$, by Lemma \ref%
{ergodic.extremal}. 
\end{proof}%

We show now the converse of Lemma \ref{ergodic.extremal} which is not as
obvious as the proof of Lemma \ref{ergodic.extremal}. Take, for instance,
the trivial action of the group $(\mathbb{Z}_{\vec{\ell}}^{d},+)$ on the $%
C^{\ast }$-algebra $\mathcal{U}$ given by $\tilde{\alpha}_{x}:A\mapsto A$
for all $x\in \mathbb{Z}_{\vec{\ell}}^{d}$. Observe that w.r.t. this choice,
the set of invariant states is simply the set $E$ of all states. Then, by
the proof of Lemma \ref{ergodic.extremal}, any ergodic state w.r.t. this
action is again an extreme point of the set of all states. But, generally,
extreme states are not ergodic w.r.t. the trivial action of $\mathbb{Z}_{%
\vec{\ell}}^{d}$: Consider for simplicity the case of quantum spin systems
(cf. Remark \ref{Quantum spin systems}). For a given element $A\in \mathcal{U%
}$ such that $A^{\ast }A\not=A$, we can always find a state $\rho $
satisfying $\rho (A^{\ast }A)\not=|\rho (A)|^{2}$ and thus, because the set $%
\mathcal{E}$ of extreme states of $E$ is weak$^{\ast }$--dense in $E$ (see 
\cite[Example 4.1.31.]{BrattelliRobinsonI}), there is an extreme state with
this property .

In order to get the equivalence between ergodicity and extremality of
states, the \emph{asymptotic abelianess}%
\index{Asymptotic abelianess|textbf} of the even sub--algebra $\mathcal{U}%
^{+}$ (\ref{definition of even operators}), i.e., the fact that 
\begin{equation}
\lim\limits_{|x|\rightarrow \infty }[A,\alpha _{x}(B)]=0\mathrm{\quad for\
any}\ A,B\in \mathcal{U}^{+},  \label{asymtptic abelian}
\end{equation}%
is crucial.

Indeed, for any state $\rho \in E_{%
\vec{\ell}}$ with GNS\ representation $(\mathcal{H}_{\rho },\pi _{\rho
},\Omega _{\rho })$%
\index{States!GNS representation}, let us first consider the von Neumann
algebra%
\begin{equation*}
\mathfrak{R}_{\rho }:=\left[ \pi _{\rho }(\mathcal{U})\cup \{U_{x}\}_{x\in 
\mathbb{Z}_{%
\vec{\ell}}^{d}}\right] ^{\prime \prime }\subseteq {B(\mathcal{H}_{\rho })}.
\end{equation*}%
Here, $\{U_{x}\}_{x\in \mathbb{Z}_{\vec{\ell}}^{d}}$ are the unitary
operators of Theorem \ref{GNS} with $G=(\mathbb{Z}_{\vec{\ell}}^{d},+)$.
This von Neumann algebra is related to the projection $P_{\rho }:=P$ defined
via Theorem \ref{vonN} for $\mathcal{H}=\mathcal{H}_{\rho }$:

\begin{lemma}[Properties of the von Neumann algebra $\mathfrak{R}_{\protect%
\rho }$]
\label{vonN.abelian}\mbox{ }\newline
For any $\vec{\ell}$--periodic state $\rho \in E_{\vec{\ell}}$, $P_{\rho
}\in \mathfrak{R}_{\rho }$ and $P_{\rho }\mathfrak{R}_{\rho }P_{\rho }$ is
an abelian von Neumann algebra on $P_{\rho }\mathcal{H}_{\rho }$.
\end{lemma}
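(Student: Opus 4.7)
The plan is to prove the two claims in sequence. For membership $P_\rho\in\mathfrak{R}_\rho$, I would argue directly from the von Neumann ergodic theorem (Theorem~\ref{vonN}): since each unitary $U_x\in\mathfrak{R}_\rho$, the finite averages
\[
P_\rho^{(L)}:=\frac{1}{|\Lambda_L\cap\mathbb{Z}^d_{\vec\ell}|}\sum_{x\in\Lambda_L\cap\mathbb{Z}^d_{\vec\ell}} U_x
\]
lie in $\mathfrak{R}_\rho$, converge strongly to $P_\rho$, and $\mathfrak{R}_\rho$ is closed in the strong operator topology.

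For abelianness of $P_\rho\mathfrak{R}_\rho P_\rho$ (which is automatically a von Neumann algebra on $P_\rho\mathcal{H}_\rho$ because $P_\rho$ is a projection in $\mathfrak{R}_\rho$), I would first verify the key identity
\[
P_\rho \pi_\rho(A) P_\rho \pi_\rho(B) P_\rho = P_\rho \pi_\rho(B) P_\rho \pi_\rho(A) P_\rho \qquad (A,B\in\mathcal{U}^+).
\]
The computation exploits the intertwining relation $U_x\pi_\rho(B)U_x^{\ast}=\pi_\rho(\alpha_x(B))$ from Theorem~\ref{GNS} and the invariance $U_x P_\rho=P_\rho=P_\rho U_x$ from Theorem~\ref{vonN}:
\[
P_\rho \pi_\rho(A) P_\rho \pi_\rho(B) P_\rho
=\lim_{L\to\infty}\frac{1}{|\Lambda_L\cap\mathbb{Z}^d_{\vec\ell}|}\sum_{x}P_\rho\pi_\rho\!\left(A\,\alpha_x(B)\right)P_\rho .
\]
Asymptotic abelianness of $\mathcal{U}^+$, i.e.\ $\lim_{|x|\to\infty}\|[A,\alpha_x(B)]\|=0$, combined with a straightforward Cesàro argument, lets me replace $A\alpha_x(B)$ by $\alpha_x(B)A$ in the average without changing the limit. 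Unwinding via $P_\rho U_x=P_\rho$ then produces $P_\rho\pi_\rho(B)P_\rho\pi_\rho(A)P_\rho$.

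The extension from $\mathcal{U}^+$ to all of $\mathcal{U}$ is immediate from Corollary~\ref{coro.even copy(1)}: for odd $A$ one has $P_\rho\pi_\rho(A)P_\rho=0$, so decomposing any element into its even and odd parts, the commutativity identity survives for arbitrary $A,B\in\mathcal{U}$. To include the unitaries $U_x$ entering the definition of $\mathfrak{R}_\rho$ I use again $P_\rho U_x P_\rho=P_\rho$, so the $U_x$'s contribute only the identity of the compressed algebra. A general product $P_\rho\pi_\rho(A_1)U_{x_1}\pi_\rho(A_2)\cdots U_{x_{n-1}}\pi_\rho(A_n)P_\rho$ can therefore be rewritten as $P_\rho\pi_\rho(A_1\alpha_{x_1}(A_2)\alpha_{x_1+x_2}(A_3)\cdots)P_\rho$, so the $\ast$-algebra $P_\rho\mathfrak{A}P_\rho$ generated by $P_\rho\pi_\rho(\mathcal{U})P_\rho$ and $\{P_\rho U_x P_\rho\}$ is contained in (and norm-dense in) $\{P_\rho\pi_\rho(A)P_\rho:A\in\mathcal{U}\}$, hence abelian.

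Finally, $P_\rho\mathfrak{R}_\rho P_\rho$ is the strong operator closure of this $\ast$-algebra on $P_\rho\mathcal{H}_\rho$; since multiplication is jointly strongly continuous on norm-bounded sets (and the Kaplansky density theorem provides the bounded approximants), commutativity is preserved under the closure. I expect the main technical point to be the careful Cesàro step: justifying that the norm-convergence $\|[A,\alpha_x(B)]\|\to 0$ passes through the averaging over $\Lambda_L\cap\mathbb{Z}^d_{\vec\ell}$, and then through the strong-limit operations defining $P_\rho$ (so that the commutator estimate yields equality of operators, not merely of expectation values). All remaining manipulations are formal consequences of the intertwining property of $\{U_x\}$ and the invariance $U_x P_\rho=P_\rho$.
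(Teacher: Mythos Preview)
Your proof is correct and follows essentially the same route as the paper: both establish $P_\rho\in\mathfrak{R}_\rho$ via the strong convergence $P_\rho^{(L)}\to P_\rho$, reduce $P_\rho\mathfrak{R}_\rho P_\rho$ to the closure of $P_\rho\pi_\rho(\mathcal{U})P_\rho$ by using $U_x\pi_\rho(A)=\pi_\rho(\alpha_x(A))U_x$ and $P_\rho U_x=P_\rho$, and then prove commutativity of the compressed algebra from asymptotic abelianness of $\mathcal{U}^+$ together with Corollary~\ref{coro.even copy(1)} for odd elements. The only cosmetic differences are that the paper works with \emph{local} even elements (so that $[A,\alpha_x(B)]=0$ exactly for large $|x|$, avoiding your Ces\`aro step) and passes directly to the strong closure rather than invoking Kaplansky density.
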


\begin{proof}%
On the one hand, by Theorem \ref{vonN}, the projection $P_{\rho }$ is the
strong limit of linear combinations of unitary operators $U_{x}$ for $x\in 
\mathbb{Z}_{\vec{\ell}}^{d}$ and so, $P_{\rho }\in \mathfrak{R}_{\rho }$. On
the other hand, if $\mathfrak{M}$ is a von Neumann algebra on a Hilbert
space $\mathcal{H}$ and $P$ is any projection from $\mathfrak{M}$, the set $P%
\mathfrak{M}P$ is a von Neumann algebra on $P\mathcal{H}$. See, e.g., \cite[%
Lemma IV.2.5]{Israel}. Therefore, it remains to show that $P_{\rho }%
\mathfrak{R}_{\rho }P_{\rho }$ is abelian. To prove it we adapt now the
proof of \cite[Lemma IV.2.6]{Israel} -- performed for quantum spin systems
-- to the case where $\mathcal{U}$ is a fermion algebra. In particular, we
show first that $P_{\rho }\mathfrak{R}_{\rho }P_{\rho }=[P_{\rho }\pi _{\rho
}(\mathcal{U})P_{\rho }]^{\prime \prime }$ and then the abelianess of $%
[P_{\rho }\pi _{\rho }(\mathcal{U})P_{\rho }]^{\prime \prime }$.

Since $U_{x}\pi _{\rho }(A)=\pi _{\rho }(\alpha _{x}(A))U_{x}$, it follows
that each element $B\in \mathfrak{R}_{\rho }$ is the strong limit as $%
n\rightarrow \infty $ of a sequence of elements of the form 
\begin{equation*}
B_{n}:=\sum\limits_{j}U_{x_{j}}\pi _{\rho }(A_{j})
\end{equation*}%
with $x_{j}\in \mathbb{Z}_{\vec{\ell}}^{d}$ and $A_{j}\in \mathcal{U}$. In
particular, by using Theorem \ref{vonN}, each element of $P_{\rho }\mathfrak{%
R}_{\rho }P_{\rho }$ is the strong limit as $n\rightarrow \infty $ of
elements of the form%
\begin{equation*}
P_{\rho }B_{n}P_{\rho }=P_{\rho }\pi _{\rho }(\sum\limits_{j}A_{j})P_{\rho }.
\end{equation*}%
In other words, since $P_{\rho }\pi _{\rho }(\mathcal{U})P_{\rho }\subseteq
P_{\rho }\mathfrak{R}_{\rho }P_{\rho }$ is clear, we deduce from the last
equality that $P_{\rho }\mathfrak{R}_{\rho }P_{\rho }=[P_{\rho }\pi _{\rho }(%
\mathcal{U})P_{\rho }]^{\prime \prime }$ as $[P_{\rho }\pi _{\rho }(\mathcal{%
U})P_{\rho }]^{\prime \prime }$ is the strong closure of $P_{\rho }\pi
_{\rho }(\mathcal{U})P_{\rho }$.

Take now two local even elements $A,B\in \mathcal{U}_{\Lambda }\cap \mathcal{%
U}^{+}$ with $\Lambda \in \mathcal{P}_{f}(\mathfrak{L})$. Then via Theorem %
\ref{vonN}, for all $\xi \in \mathcal{H}_{\rho }$,%
\begin{eqnarray}
\lefteqn{\left[ \left( P_{\rho }\pi _{\rho }(A)P_{\rho }\right) \left(
P_{\rho }\pi _{\rho }(B)P_{\rho }\right) -\left( P_{\rho }\pi _{\rho
}(B)P_{\rho }\right) \left( P_{\rho }\pi _{\rho }(A)P_{\rho }\right) \right]
\xi }  \label{vonN.commut.2bis} \\
&=&\lim\limits_{L\rightarrow \infty }\left[ \left( P_{\rho }\pi _{\rho
}(A)P_{\rho }^{(L)}\pi _{\rho }(B)P_{\rho }\right) -\left( P_{\rho }\pi
_{\rho }(B)P_{\rho }^{(L)}\pi _{\rho }(A)P_{\rho }\right) \right] \xi  \notag
\\
&=&\lim\limits_{L\rightarrow \infty }\frac{1}{|\Lambda _{L}\cap \mathbb{Z}_{%
\vec{\ell}}^{d}|}\sum\limits_{x\in \Lambda _{L}\cap \mathbb{Z}_{\vec{\ell}%
}^{d}}P_{\rho }[A,\alpha _{x}(B)]P_{\rho }=0\   \notag
\end{eqnarray}%
because $[A,\alpha _{x}(B)]=0$ for any $x\in \mathbb{Z}^{d}$ such that $%
d(x,0)\geq 2|\Lambda |$, see (\ref{def.dist}) for the definition of the
metric $d$. From Corollary \ref{coro.even copy(1)}, recall that $P_{\rho
}\pi _{\rho }(A)P_{\rho }=0$ for any odd element $A\in \mathcal{U}$.
Therefore, by combining this with the density of the $\ast $--algebra $%
\mathcal{U}_{0}\subseteq \mathcal{U}$ of local elements, we can extend the
equality (\ref{vonN.commut.2bis}) to any $A,B\in \mathcal{U}$, i.e., for all 
$A,B\in \mathcal{U}$, 
\begin{equation*}
\lbrack P_{\rho }\pi _{\rho }(A)P_{\rho },P_{\rho }\pi _{\rho }(B)P_{\rho
}]=0.
\end{equation*}%
In other words, $P_{\rho }\pi _{\rho }(\mathcal{U})P_{\rho }$ is abelian.
Since $P_{\rho }\pi _{\rho }(\mathcal{U})P_{\rho }$ is strongly dense in $%
[P_{\rho }\pi _{\rho }(\mathcal{U})P_{\rho }]^{\prime \prime }=P_{\rho }%
\mathfrak{R}_{\rho }P_{\rho }$, the von Neumann algebra $P_{\rho }\mathfrak{R%
}_{\rho }P_{\rho }$ is itself abelian. 
\end{proof}%

We are now in position to show that all extreme points $\rho \in \mathcal{E}%
_{\vec{\ell}}$ of $E_{\vec{\ell}}$ are ergodic.

\begin{lemma}[Extremality implies ergodicity]
\label{lemma extremal.ergodic}\mbox{ }\newline
For any extreme state $\hat{\rho}\in \mathcal{E}_{\vec{\ell}}$ of $E_{\vec{%
\ell}}$, $P_{\hat{\rho}}$ is the orthogonal projection on the
one--dimensional sub--space generated by $\Omega _{\hat{\rho}}$. In
particular, any state $\hat{\rho}\in \mathcal{E}_{\vec{\ell}}$ is ergodic.
\end{lemma}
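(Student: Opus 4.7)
The plan is to combine the abelianess of $P_{\hat{\rho}} \mathfrak{R}_{\hat{\rho}} P_{\hat{\rho}}$ (Lemma \ref{vonN.abelian}) with the extremality of $\hat{\rho}$ via a decomposition argument, adapting the classical quantum spin treatment (cf.\ \cite[Section 4.3.2]{BrattelliRobinsonI}) to the fermion setting with Corollary \ref{coro.even copy(1)} compensating for the lack of local commutativity. Fix $\hat{\rho} \in \mathcal{E}_{\vec{\ell}}$ with GNS triple $(\mathcal{H}_{\hat{\rho}}, \pi_{\hat{\rho}}, \Omega_{\hat{\rho}})$ and the unitary family $\{U_x\}_{x \in \mathbb{Z}^d_{\vec{\ell}}}$ from Theorem \ref{GNS}.

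First I would argue by contraposition: assume that $P_{\hat{\rho}}$ has rank strictly greater than one. Since $P_{\hat{\rho}} \mathfrak{R}_{\hat{\rho}} P_{\hat{\rho}}$ is an abelian von Neumann algebra on $P_{\hat{\rho}} \mathcal{H}_{\hat{\rho}}$ with unit $P_{\hat{\rho}}$, spectral theory supplies a projection $Q$ in this algebra distinct from $0$ and $P_{\hat{\rho}}$, with $\lambda := \langle \Omega_{\hat{\rho}}, Q \Omega_{\hat{\rho}}\rangle \in (0,1)$. From $Q$ I would build the two functionals
\begin{equation*}
\rho_1(A) := \lambda^{-1} \langle Q \Omega_{\hat{\rho}}, \pi_{\hat{\rho}}(A) Q \Omega_{\hat{\rho}}\rangle, \quad \rho_2(A) := (1-\lambda)^{-1} \langle (P_{\hat{\rho}}-Q)\Omega_{\hat{\rho}}, \pi_{\hat{\rho}}(A) (P_{\hat{\rho}}-Q)\Omega_{\hat{\rho}}\rangle.
\end{equation*}
Using $U_x P_{\hat{\rho}} = P_{\hat{\rho}} = P_{\hat{\rho}} U_x$, the intertwining $\pi_{\hat{\rho}}(\alpha_x(A)) = U_x \pi_{\hat{\rho}}(A) U_x^*$ of Theorem \ref{GNS}, and $Q, P_{\hat{\rho}}-Q \in P_{\hat{\rho}} \mathfrak{R}_{\hat{\rho}} P_{\hat{\rho}}$, I would verify that $\rho_1, \rho_2 \in E_{\vec{\ell}}$ and that $\hat{\rho} = \lambda \rho_1 + (1-\lambda)\rho_2$, so that extremality forces $\rho_1 = \rho_2 = \hat{\rho}$.

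The decisive step is to turn $\rho_1 = \hat{\rho}$ into the operator identity $Q = \lambda P_{\hat{\rho}}$. Exploiting $Q = P_{\hat{\rho}} Q = Q P_{\hat{\rho}}$, $P_{\hat{\rho}}\Omega_{\hat{\rho}} = \Omega_{\hat{\rho}}$, $Q^2 = Q$, and the fact that $Q$ commutes with $P_{\hat{\rho}} \pi_{\hat{\rho}}(A) P_{\hat{\rho}}$ by abelianess, the relation $\rho_1 = \hat{\rho}$ rewrites as
\begin{equation*}
\langle \Omega_{\hat{\rho}}, (Q - \lambda P_{\hat{\rho}})\, P_{\hat{\rho}} \pi_{\hat{\rho}}(A) P_{\hat{\rho}}\, \Omega_{\hat{\rho}}\rangle = 0 \quad \text{for every } A \in \mathcal{U}.
\end{equation*}
Cyclicity of $\Omega_{\hat{\rho}}$ for $\mathfrak{R}_{\hat{\rho}}$ together with $P_{\hat{\rho}}\Omega_{\hat{\rho}} = \Omega_{\hat{\rho}}$ makes $\{P_{\hat{\rho}} \pi_{\hat{\rho}}(A) P_{\hat{\rho}} \Omega_{\hat{\rho}}\}_{A \in \mathcal{U}}$ dense in $P_{\hat{\rho}} \mathcal{H}_{\hat{\rho}}$, whence $(Q - \lambda P_{\hat{\rho}})\Omega_{\hat{\rho}} = 0$. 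Since the algebra $P_{\hat{\rho}} \mathfrak{R}_{\hat{\rho}} P_{\hat{\rho}}$ is abelian, the cyclic vector $\Omega_{\hat{\rho}}$ is also separating, forcing $Q = \lambda P_{\hat{\rho}}$ -- but then $Q^2 = Q$ gives $\lambda \in \{0,1\}$, contradicting $\lambda \in (0,1)$. Hence $P_{\hat{\rho}}$ has rank one and, because $P_{\hat{\rho}}\Omega_{\hat{\rho}} = \Omega_{\hat{\rho}}$, coincides with $|\Omega_{\hat{\rho}}\rangle\langle \Omega_{\hat{\rho}}|$. Plugging this into identity (\ref{ergodic eqbis}) gives $\Delta_{A,\vec{\ell}}(\hat{\rho}) = \|P_{\hat{\rho}} \pi_{\hat{\rho}}(A) \Omega_{\hat{\rho}}\|^2 = |\hat{\rho}(A)|^2$ for all $A \in \mathcal{U}$, which is Definition \ref{def:egodic}.

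The principal obstacle is the implication $\rho_1 = \hat{\rho} \Rightarrow Q = \lambda P_{\hat{\rho}}$, which converts coincidence of expectations into an operator-level identity. Three ingredients make it work: the abelianess of $P_{\hat{\rho}} \mathfrak{R}_{\hat{\rho}} P_{\hat{\rho}}$, whose proof (Lemma \ref{vonN.abelian}) relies on the asymptotic abelianess of $\mathcal{U}^+$ together with Corollary \ref{coro.even copy(1)} for odd elements; the automatic upgrade of cyclicity to the separating property in abelian von Neumann algebras; and the effective reduction of $\pi_{\hat{\rho}}(A)$ to its invariant block $P_{\hat{\rho}} \pi_{\hat{\rho}}(A) P_{\hat{\rho}}$ afforded by $P_{\hat{\rho}}\Omega_{\hat{\rho}} = \Omega_{\hat{\rho}}$.
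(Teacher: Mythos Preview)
Your argument is correct and takes a genuinely different route from the paper. One small imprecision: the sentence ``spectral theory supplies a projection $Q$ in this algebra distinct from $0$ and $P_{\hat\rho}$'' does not follow from abelianess and $\dim P_{\hat\rho}\mathcal{H}_{\hat\rho}>1$ alone (an abelian von Neumann algebra on a large space can be $\mathbb{C}P_{\hat\rho}$). What actually forces $P_{\hat\rho}\mathfrak{R}_{\hat\rho}P_{\hat\rho}\neq\mathbb{C}P_{\hat\rho}$ is the cyclicity of $\Omega_{\hat\rho}$ that you invoke later: if the compressed algebra were scalars, then $\{P_{\hat\rho}\pi_{\hat\rho}(A)P_{\hat\rho}\Omega_{\hat\rho}\}$ would lie in $\mathbb{C}\Omega_{\hat\rho}$, forcing $P_{\hat\rho}\mathcal{H}_{\hat\rho}=\mathbb{C}\Omega_{\hat\rho}$. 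The same separating property you use at the end also gives $\lambda\in(0,1)$, which you assert without comment.

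The paper proceeds differently: it decomposes $\hat\rho$ via a nontrivial projection $P$ in the \emph{commutant} $\mathfrak{R}_{\hat\rho}'$ (rather than in $P_{\hat\rho}\mathfrak{R}_{\hat\rho}P_{\hat\rho}$), deduces irreducibility $\mathfrak{R}_{\hat\rho}'=\mathbb{C}\mathbf{1}$ from extremality, and then applies the reduction identity $[P_{\hat\rho}\mathfrak{R}_{\hat\rho}P_{\hat\rho}]'=P_{\hat\rho}\mathfrak{R}_{\hat\rho}'P_{\hat\rho}=\mathbb{C}P_{\hat\rho}$ together with abelianess to get $P_{\hat\rho}\mathfrak{R}_{\hat\rho}P_{\hat\rho}=\mathbb{C}P_{\hat\rho}$. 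Your approach trades that commutant identity for the fact that a cyclic vector for an abelian von Neumann algebra is separating, which lets you upgrade $\rho_1=\hat\rho$ to the operator identity $Q=\lambda P_{\hat\rho}$. The paper's route yields the stronger intermediate statement that $\mathfrak{R}_{\hat\rho}$ is irreducible; yours is more self-contained in that it stays entirely within the compressed algebra and does not need the reduction formula $[P\mathfrak{M}P]'=P\mathfrak{M}'P$.
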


\begin{proof}
For any $\hat{\rho}\in \mathcal{E}_{\vec{\ell}}$, observe that the von
Neumann algebra $\mathfrak{R}_{\hat{\rho}}$ is irreducible, i.e., $\mathfrak{%
R}_{\hat{\rho}}^{\prime }=\mathbb{C}\,\mathbf{1}$. Indeed, by contradiction,
assume that $\mathfrak{R}_{\hat{\rho}}^{\prime }$ is strictly larger than
its sub--algebra $\mathbb{C}\,\mathbf{1}$. Then there is at least one
non--trivial (orthogonal) projection $P\in \mathfrak{R}_{\hat{\rho}}^{\prime
}$. By cyclicity of $\Omega _{\hat{\rho}}$ w.r.t. $\mathfrak{R}_{\hat{\rho}%
}^{\prime \prime }$, $P\Omega _{\hat{\rho}}\not=0$ and thus $\langle \Omega
_{\hat{\rho}},P\Omega _{\hat{\rho}}\rangle =\Vert P\Omega _{\hat{\rho}}\Vert
_{2}^{2}>0$. Similarly, $\langle \Omega _{\hat{\rho}},(\mathbf{1}-P)\Omega _{%
\hat{\rho}}\rangle >0$. Define the following continuous linear functionals
on $\mathcal{U}$: 
\begin{eqnarray*}
\rho _{1}(A):= &&\langle \Omega _{\hat{\rho}},P\Omega _{\hat{\rho}}\rangle
^{-1}\langle \Omega _{\hat{\rho}},P\pi _{\hat{\rho}}(A)\Omega _{\hat{\rho}%
}\rangle , \\
\rho _{2}(A):= &&\langle \Omega _{\hat{\rho}},(\mathbf{1}-P)\Omega _{\hat{%
\rho}}\rangle ^{-1}\langle \Omega _{\hat{\rho}},(\mathbf{1}-P)\pi _{\hat{\rho%
}}(A)\Omega _{\hat{\rho}}\rangle .
\end{eqnarray*}%
Observe that, by cyclicity of $\Omega _{\hat{\rho}}$ w.r.t. $\pi _{\hat{\rho}%
}(\mathcal{U})$, $\rho _{1}\neq \rho _{2}$. Since $U_{x}\Omega _{\hat{\rho}%
}=\Omega _{\hat{\rho}}$ and $P$ commutes by definition with $\pi _{\hat{\rho}%
}(A)\ $and $U_{x}$ for all $A\in \mathcal{U}$ and $x\in \mathbb{Z}_{\vec{\ell%
}}^{d}$, the functionals $\rho _{1}$ and $\rho _{2}\ $belong to $E_{\vec{\ell%
}}$, whereas%
\begin{equation*}
\hat{\rho}=\langle \Omega _{\hat{\rho}},P\Omega _{\hat{\rho}}\rangle \rho
_{1}+\langle \Omega _{\hat{\rho}},(\mathbf{1}-P)\Omega _{\hat{\rho}}\rangle
\rho _{2}\ .
\end{equation*}%
Since $\langle \Omega _{\hat{\rho}},(\mathbf{1}-P)\Omega _{\hat{\rho}%
}\rangle >0$ and $\langle \Omega _{\hat{\rho}},P\Omega _{\hat{\rho}}\rangle
>0$, this last equality contradicts the fact that $\hat{\rho}\in \mathcal{E}%
_{\vec{\ell}}$. Therefore, $\mathfrak{R}_{\hat{\rho}}^{\prime }=\mathbb{C}\,%
\mathbf{1}$ whenever $\hat{\rho}\in \mathcal{E}_{\vec{\ell}}$.

Observe now that%
\begin{equation}
\lbrack P_{\hat{\rho}}\mathfrak{R}_{\hat{\rho}}P_{\hat{\rho}}]^{\prime }=P_{%
\hat{\rho}}\mathfrak{R}_{\hat{\rho}}^{\prime }P_{\hat{\rho}}=\mathbb{C}\,P_{%
\hat{\rho}}\ .  \label{PRP=C1}
\end{equation}%
Here we use that, for any von Neumann algebra $\mathfrak{M}$ and any
orthogonal projection $P\in \mathfrak{M}$, $[P\,\mathfrak{M}\,P]^{\prime
}=P\,\mathfrak{M}^{\prime }P$, see, e.g., \cite[Lemma IV.2.5]{Israel}. By
Lemma \ref{vonN.abelian}, the von Neumann algebra $P_{\hat{\rho}}\mathfrak{R}%
_{\hat{\rho}}P_{\hat{\rho}}$ is abelian. In particular, from (\ref{PRP=C1}), 
\begin{equation*}
P_{\hat{\rho}}\mathfrak{R}_{\hat{\rho}}P_{\hat{\rho}}\subseteq P_{\hat{\rho}}%
\mathfrak{R}_{\hat{\rho}}^{\prime }P_{\hat{\rho}}=\mathbb{C}\,P_{\hat{\rho}}
\end{equation*}%
which implies that $P_{\hat{\rho}}\mathfrak{R}_{\hat{\rho}}P_{\hat{\rho}}=%
\mathbb{C}\,P_{\hat{\rho}}$. This yields%
\begin{equation*}
P_{\hat{\rho}}\pi _{\hat{\rho}}(A)\Omega _{\hat{\rho}}=P_{\hat{\rho}}\pi _{%
\hat{\rho}}(A)P_{\hat{\rho}}\Omega _{\hat{\rho}}\in \mathbb{C}\,P_{\hat{\rho}%
}\Omega _{\hat{\rho}}=\mathbb{C}\,\Omega _{\hat{\rho}}
\end{equation*}%
for any $A\in \mathcal{U}$. In other words, by cyclicity of $\Omega _{\hat{%
\rho}}$, $P_{\hat{\rho}}\mathcal{H}_{\hat{\rho}}=\mathbb{C}\,\Omega _{\hat{%
\rho}}$ and thus 
\begin{eqnarray*}
\Vert P_{\hat{\rho}}\pi _{\hat{\rho}}(A)\Omega _{\hat{\rho}}\Vert ^{2}
&=&\langle P_{\hat{\rho}}\pi _{\hat{\rho}}(A)\Omega _{\hat{\rho}},P_{\hat{%
\rho}}\pi _{\hat{\rho}}(A)\Omega _{\hat{\rho}}\rangle \\
&=&\langle P_{\hat{\rho}}\pi _{\hat{\rho}}(A)\Omega _{\hat{\rho}},\Omega _{%
\hat{\rho}}\rangle \langle \Omega _{\hat{\rho}},P_{\hat{\rho}}\pi _{\hat{\rho%
}}(A)\Omega _{\hat{\rho}}\rangle \\
&=&\langle \pi _{\hat{\rho}}(A)\Omega _{\hat{\rho}},\Omega _{\hat{\rho}%
}\rangle \langle \Omega _{\hat{\rho}},\pi _{\hat{\rho}}(A)\Omega _{\hat{\rho}%
}\rangle
\end{eqnarray*}%
implying, by (\ref{ergodic eqbis}), that any state $\hat{\rho}\in \mathcal{E}%
_{\vec{\ell}}$ is ergodic. 
\end{proof}%

As we can relate the ergodicity with the so--called \emph{strongly clustering%
} property \cite[Section 4.3.2]{BrattelliRobinsonI}, we deduce from Lemma %
\ref{lemma extremal.ergodic} that any extreme state $\hat{\rho}\in \mathcal{E%
}_{\vec{\ell}}$ is strongly clustering:

\begin{corollary}[Extreme states are strongly clustering]
\label{Corollary Extremal states - strongly clustering}\mbox{ }\newline
\index{States!strongly clustering}Any extreme state $%
\hat{\rho}\in \mathcal{E}_{\vec{\ell}}$ is strongly clustering, i.e., for
all $A,B\in \mathcal{U}$, 
\begin{equation}
\lim\limits_{L\rightarrow \infty }\frac{1}{|\Lambda _{L}\cap \mathbb{Z}_{%
\vec{\ell}}^{d}|}\sum\limits_{y\in \Lambda _{L}\cap \mathbb{Z}_{\vec{\ell}%
}^{d}}\hat{\rho}\left( \alpha _{x}(A)\alpha _{y}(B)\right) =\hat{\rho}(A)%
\hat{\rho}(B)  \label{strongly.clustering}
\end{equation}%
uniformly in $x\in \mathbb{Z}_{\vec{\ell}}^{d}$.
\end{corollary}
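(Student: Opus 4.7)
The plan is to deduce strong clustering from the one-dimensionality of $\mathrm{ran}\,P_{\hat{\rho}}$ proved in Lemma~\ref{lemma extremal.ergodic} together with the strong convergence $P^{(L)}_{\hat{\rho}} \to P_{\hat{\rho}}$ from the von Neumann ergodic theorem (Theorem~\ref{vonN}). Working in the GNS representation $(\mathcal{H}_{\hat{\rho}}, \pi_{\hat{\rho}}, \Omega_{\hat{\rho}})$ of $\hat{\rho}$ with the unitaries $\{U_x\}_{x\in \mathbb{Z}_{\vec{\ell}}^d}$ supplied by Theorem~\ref{GNS}, the key identity is $\pi_{\hat{\rho}}(\alpha_y(B))\Omega_{\hat{\rho}} = U_y \pi_{\hat{\rho}}(B) U_y^* \Omega_{\hat{\rho}} = U_y \pi_{\hat{\rho}}(B)\Omega_{\hat{\rho}}$, where the last equality uses $U_y \Omega_{\hat{\rho}} = \Omega_{\hat{\rho}}$.

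First, I would rewrite the Cesàro average as
\begin{equation*}
\frac{1}{|\Lambda_L \cap \mathbb{Z}_{\vec{\ell}}^d|} \sum_{y \in \Lambda_L \cap \mathbb{Z}_{\vec{\ell}}^d} \hat{\rho}\bigl(\alpha_x(A)\alpha_y(B)\bigr) = \bigl\langle U_x \pi_{\hat{\rho}}(A)^* \Omega_{\hat{\rho}},\, P^{(L)}_{\hat{\rho}}\, \pi_{\hat{\rho}}(B) \Omega_{\hat{\rho}} \bigr\rangle,
\end{equation*}
where I have also used $\pi_{\hat{\rho}}(\alpha_x(A))^*\Omega_{\hat{\rho}} = U_x \pi_{\hat{\rho}}(A)^* \Omega_{\hat{\rho}}$. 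By Lemma~\ref{lemma extremal.ergodic}, $P_{\hat{\rho}}$ is the orthogonal projection onto $\mathbb{C}\,\Omega_{\hat{\rho}}$, so that $P_{\hat{\rho}} \pi_{\hat{\rho}}(B)\Omega_{\hat{\rho}} = \langle \Omega_{\hat{\rho}}, \pi_{\hat{\rho}}(B)\Omega_{\hat{\rho}}\rangle\,\Omega_{\hat{\rho}} = \hat{\rho}(B)\,\Omega_{\hat{\rho}}$. Replacing $P^{(L)}_{\hat{\rho}}$ by its limit $P_{\hat{\rho}}$ and using $U_x^* \Omega_{\hat{\rho}} = \Omega_{\hat{\rho}}$ gives the target value $\hat{\rho}(A)\hat{\rho}(B)$.

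The uniformity in $x$ is then essentially free. By Cauchy–Schwarz and the unitarity of $U_x$,
\begin{equation*}
\Bigl| \bigl\langle U_x \pi_{\hat{\rho}}(A)^* \Omega_{\hat{\rho}},\, (P^{(L)}_{\hat{\rho}} - P_{\hat{\rho}})\, \pi_{\hat{\rho}}(B)\Omega_{\hat{\rho}} \bigr\rangle \Bigr| \leq \|\pi_{\hat{\rho}}(A)^*\Omega_{\hat{\rho}}\|\; \bigl\|(P^{(L)}_{\hat{\rho}} - P_{\hat{\rho}})\pi_{\hat{\rho}}(B)\Omega_{\hat{\rho}}\bigr\|,
\end{equation*}
and the right-hand side tends to zero as $L \to \infty$ by Theorem~\ref{vonN}, with a bound independent of $x \in \mathbb{Z}_{\vec{\ell}}^d$. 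There is no real obstacle in this argument: the main point is the reformulation of the $y$-sum as $P^{(L)}_{\hat{\rho}}$ acting on $\pi_{\hat{\rho}}(B)\Omega_{\hat{\rho}}$, after which strong convergence and the explicit form of $P_{\hat{\rho}}$ for extreme states do all the work. The uniformity, which is the only nontrivial feature compared with ergodicity (Definition~\ref{def:egodic}), comes entirely from the fact that the $x$-dependence sits inside a unitary and therefore does not affect the Cauchy–Schwarz norm estimate.
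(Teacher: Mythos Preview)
Your proof is correct and follows essentially the same approach as the paper's own proof: rewrite the $y$-average as $\langle U_x\pi_{\hat\rho}(A^*)\Omega_{\hat\rho},\,P^{(L)}_{\hat\rho}\pi_{\hat\rho}(B)\Omega_{\hat\rho}\rangle$, pass to the strong limit $P_{\hat\rho}$ via Theorem~\ref{vonN}, use the one-dimensionality of $\mathrm{ran}\,P_{\hat\rho}$ from Lemma~\ref{lemma extremal.ergodic}, and invoke Cauchy--Schwarz together with the unitarity of $U_x$ for the uniformity in $x$. The paper's argument is identical in structure and detail.
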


\begin{proof}
This corollary can directly be seen from Lemma \ref{lemma extremal.ergodic}
combined with Theorem \ref{vonN} because%
\begin{eqnarray*}
\lim\limits_{L\rightarrow \infty }\frac{1}{|\Lambda _{L}\cap \mathbb{Z}_{%
\vec{\ell}}^{d}|}\sum\limits_{y\in \Lambda _{L}\cap \mathbb{Z}_{\vec{\ell}%
}^{d}}\hat{\rho}\left( \alpha _{x}(A)\alpha _{y}(B)\right)
&=&\lim\limits_{L\rightarrow \infty }\langle U_{x}\pi _{\hat{\rho}}(A^{\ast
})\Omega _{\hat{\rho}},P_{\hat{\rho}}^{(L)}\pi _{\hat{\rho}}(B)\Omega _{\hat{%
\rho}}\rangle \\
&=&\langle U_{x}\pi _{\hat{\rho}}(A^{\ast })\Omega _{\hat{\rho}},P_{\hat{\rho%
}}\pi _{\hat{\rho}}(B)\Omega _{\hat{\rho}}\rangle \\
&=&\langle \Omega _{\hat{\rho}},\pi _{\hat{\rho}}(A)\Omega _{\hat{\rho}%
}\rangle \langle \Omega _{\hat{\rho}},\pi _{\hat{\rho}}(B)\Omega _{\hat{\rho}%
}\rangle
\end{eqnarray*}%
for any $A,B\in \mathcal{U}$ and $x\in \mathbb{Z}_{\vec{\ell}}^{d}$. By
using Cauchy--Schwarz inequality, note that the limit $L\rightarrow \infty $
is uniform in $x\in \mathbb{Z}_{\vec{\ell}}^{d}$ because $P_{\hat{\rho}%
}^{(L)}$ converges strongly to the projection $P_{\hat{\rho}}$. See Theorem %
\ref{vonN}.%
\end{proof}%

Therefore, Theorem \ref{theorem ergodic extremal} is a consequence of
Lemmata \ref{ergodic.extremal} and \ref{lemma extremal.ergodic} together
with Corollary \ref{Corollary Extremal states - strongly clustering}.

\section{Properties of the space--averaging functional $\Delta _{A}$\label%
{Section properties of delta}%
\index{Space--averaging functional}}

We characterize now the properties of the space--averaging functional $%
\Delta _{A}$ defined in Definition \ref{definition de deltabis} for any $%
A\in \mathcal{U}$ because it is intimately related with the structure of the
set $E_{1}$ of t.i. states. We start by proving that this functional is
well--defined, even for $%
\vec{\ell}$--periodic states $\rho \in E_{\vec{\ell}}$:

\begin{lemma}[Well--definiteness of the map $\protect\rho \mapsto \Delta
_{A}\left( \protect\rho \right) $]
\label{Lemma1.vonN copy(2)}\mbox{ }\newline
For any $A\in \mathcal{U}$, the space--averaging functional $\Delta _{A}$ is
well--defined on the set $E_{\vec{\ell}}$ of $\vec{\ell}$--periodic states
for any $\vec{\ell}\in \mathbb{N}^{d}$ and it satisfies%
\begin{equation*}
\Delta _{A}\left( \rho \right) =\inf\limits_{(L,\cdots ,L)\in \vec{\ell}.%
\mathbb{N}^{d}}\left\{ \rho (A_{L}^{\ast }A_{L})\right\} \in \left[ |\rho
(A_{\vec{\ell}})|^{2},\Vert A\Vert ^{2}\right] .
\end{equation*}
\end{lemma}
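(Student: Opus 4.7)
The plan is to combine the GNS representation of $\vec{\ell}$-periodic states (Theorem \ref{GNS}) with the von Neumann ergodic theorem (Theorem \ref{vonN}), in the same spirit as the computation \eqref{ergodic eqbis} carried out in the proof of Lemma \ref{ergodic.extremal}. Let $(\mathcal{H}_{\rho},\pi_{\rho},\Omega_{\rho})$ denote the GNS triple of $\rho\in E_{\vec{\ell}}$, let $\{U_{z}\}_{z\in\mathbb{Z}_{\vec{\ell}}^{d}}$ be the unitary representation of $\mathbb{Z}_{\vec{\ell}}^{d}$ on $\mathcal{H}_{\rho}$ fixing $\Omega_{\rho}$ given by Theorem \ref{GNS}, and let $P_{\rho}$ be the orthogonal projection onto the closed subspace of $\{U_{z}\}$-invariant vectors. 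The starting identity is $\rho(A_{L}^{\ast}A_{L})=\|\pi_{\rho}(A_{L})\Omega_{\rho}\|^{2}$, which I propose to exploit by partitioning $\Lambda_{L}$ into the $\ell:=\ell_{1}\cdots\ell_{d}$ cosets of $\mathbb{Z}_{\vec{\ell}}^{d}$ in $\mathbb{Z}^{d}$ with representatives $R:=\prod_{i=1}^{d}\{0,\ldots,\ell_{i}-1\}$.

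Writing each $x\in\Lambda_{L}$ uniquely as $x=y+z$ with $y\in R$ and $z\in\mathbb{Z}_{\vec{\ell}}^{d}$, and using the relation $\pi_{\rho}(\alpha_{z}(\alpha_{y}(A)))\Omega_{\rho}=U_{z}\pi_{\rho}(\alpha_{y}(A))\Omega_{\rho}$ (from Theorem \ref{GNS} together with $U_{z}^{\ast}\Omega_{\rho}=\Omega_{\rho}$), one obtains
\begin{equation*}
\pi_{\rho}(A_{L})\Omega_{\rho} \;=\; \sum_{y\in R} w_{y}^{(L)}\, Q_{L,y}\,\pi_{\rho}(\alpha_{y}(A))\Omega_{\rho},
\end{equation*}
where $w_{y}^{(L)}:=|(\Lambda_{L}-y)\cap\mathbb{Z}_{\vec{\ell}}^{d}|/|\Lambda_{L}|$ and $Q_{L,y}$ is the uniform average of the unitaries $U_{z}$ over $z\in(\Lambda_{L}-y)\cap\mathbb{Z}_{\vec{\ell}}^{d}$. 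Since $(\Lambda_{L}-y)\cap\mathbb{Z}_{\vec{\ell}}^{d}$ differs from $\Lambda_{L}\cap\mathbb{Z}_{\vec{\ell}}^{d}$ only by a boundary of vanishing relative volume, Theorem \ref{vonN} yields $Q_{L,y}\to P_{\rho}$ strongly, while a direct count gives $w_{y}^{(L)}\to 1/\ell$. Passing to the limit then establishes the existence of $\Delta_{A}(\rho)=\|P_{\rho}\pi_{\rho}(A_{\vec{\ell}})\Omega_{\rho}\|^{2}$. The upper bound $\Delta_{A}(\rho)\leq\|A\|^{2}$ is immediate from $\|\pi_{\rho}(A_{L})\|\leq\|A_{L}\|\leq\|A\|$, while the lower bound $|\rho(A_{\vec{\ell}})|^{2}\leq\Delta_{A}(\rho)$ follows from $P_{\rho}\Omega_{\rho}=\Omega_{\rho}$ and Cauchy--Schwarz applied to $\rho(A_{\vec{\ell}})=\langle\Omega_{\rho},P_{\rho}\pi_{\rho}(A_{\vec{\ell}})\Omega_{\rho}\rangle$.

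To obtain the infimum characterization over $L$ with $(L,\ldots,L)\in\vec{\ell}\cdot\mathbb{N}^{d}$, the natural strategy is to establish the pointwise inequality $\rho(A_{L}^{\ast}A_{L})\geq\Delta_{A}(\rho)$ for each such $L$, which together with convergence forces the infimum to coincide with the limit. The first reduction is $\|\pi_{\rho}(A_{L})\Omega_{\rho}\|^{2}\geq\|P_{\rho}\pi_{\rho}(A_{L})\Omega_{\rho}\|^{2}$, and the identity $P_{\rho}Q_{L,y}=P_{\rho}$ from Theorem \ref{vonN} gives $P_{\rho}\pi_{\rho}(A_{L})\Omega_{\rho}=\sum_{y}w_{y}^{(L)}P_{\rho}\pi_{\rho}(\alpha_{y}(A))\Omega_{\rho}$. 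The main obstacle is precisely this last equation: the coset weights $w_{y}^{(L)}$ are not identically $1/\ell$ for finite $L$, even along the restricted subsequence $L\in\mathrm{lcm}(\ell_{1},\ldots,\ell_{d})\cdot\mathbb{N}$, so the finite-$L$ projection does not coincide with $P_{\rho}\pi_{\rho}(A_{\vec{\ell}})\Omega_{\rho}$ and no direct norm comparison is available. The route I anticipate is to re-expand $\rho(A_{L}^{\ast}A_{L})$ as a double sum over $\Lambda_{L}\times\Lambda_{L}$, apply $\vec{\ell}$-periodicity of $\rho$ to every term, and regroup the result to exhibit it as $\Delta_{A}(\rho)$ plus a manifestly non-negative remainder capturing the transverse fluctuations $(1-P_{\rho})\pi_{\rho}(A_{L})\Omega_{\rho}$.
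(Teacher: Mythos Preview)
Your approach via the GNS representation and the von Neumann ergodic theorem coincides with the paper's, and your derivation of $\Delta_A(\rho)=\|P_\rho\pi_\rho(A_{\vec{\ell}})\Omega_\rho\|^2$ together with the two bounds is correct. You are also right to flag the non-uniform coset weights $w_y^{(L)}$ as the obstacle to the infimum claim; the paper's own proof silently conflates $\rho(A_L^{\ast}A_L)$ with $\|P_\rho^{(L)}\pi_\rho(A_{\vec{\ell}})\Omega_\rho\|^2$, and these agree only in the limit, not for each finite $L$ along the subsequence.

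Your anticipated double-sum regrouping cannot succeed, because for $\vec{\ell}\neq(1,\ldots,1)$ the infimum formula as stated is \emph{false}. Take $d=1$, $\vec{\ell}=(2)$, fix $\mathrm{s}\in\mathrm{S}$, set $A=a_{0,\mathrm{s}}^{+}a_{0,\mathrm{s}}$, and let $\rho$ be the $2$-periodic product (quasi-free) state with $\rho(\alpha_x(A))=1$ for $x$ odd and $\rho(\alpha_x(A))=0$ for $x$ even. Since $A^2=A$ and $\rho$ factorizes over disjoint sites, one gets $\rho(A_L^{\ast}A_L)=(2L+1)^{-2}\bigl(\#\{x\in\Lambda_L:x\ \text{odd}\}\bigr)^2$; for $L=2\in 2\mathbb{N}$ this equals $4/25$, whereas $\Delta_A(\rho)=\lim_L\rho(A_L^{\ast}A_L)=1/4$. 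Hence $\inf_{L\in 2\mathbb{N}}\rho(A_L^{\ast}A_L)<\Delta_A(\rho)$. What the paper's argument actually establishes is the correct identity $\Delta_A(\rho)=\inf_L\rho\bigl((A_{\vec{\ell}})_{L,\vec{\ell}}^{\ast}(A_{\vec{\ell}})_{L,\vec{\ell}}\bigr)$, and this is all that is used downstream (weak$^{\ast}$-upper semi-continuity in Lemma~\ref{Lemma1.vonN copy(3)}). For $\vec{\ell}=(1,\ldots,1)$ the obstruction disappears: then $A_{\vec{\ell}}=A$, so $\rho(A_L^{\ast}A_L)=\|P_\rho^{(L)}\pi_\rho(A)\Omega_\rho\|^2$ exactly, and the infimum claim follows at once from $P_\rho P_\rho^{(L)}=P_\rho$.
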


\begin{proof}
Assume that $(L,\cdots ,L)\in \vec{\ell}.\mathbb{N}^{d}$. In the same way we
prove (\ref{ergodic eqbis}), for any state $\rho \in E_{\vec{\ell}}$ with
GNS\ representation $(\mathcal{H}_{\rho },\pi _{\rho },\Omega _{\rho })$, we
obtain, by using Theorem \ref{GNS}%
\index{States!GNS representation} for $G=(\mathbb{Z}_{%
\vec{\ell}}^{d},+)$ and Theorem \ref{vonN}%
\index{von Neumann ergodic theorem} for $\mathcal{H}=\mathcal{H}_{\rho }$,
that%
\begin{equation}
\lim\limits_{L\rightarrow \infty }\rho (A_{L}^{\ast }A_{L})=\underset{%
L\rightarrow \infty }{\lim }\Vert P_{\rho }^{(L)}\pi _{\rho }(A_{%
\vec{\ell}})\Omega _{\rho }\Vert ^{2}=\Vert P_{\rho }\pi _{\rho }(A_{\vec{%
\ell}})\Omega _{\rho }\Vert ^{2}\leq \Vert A\Vert ^{2}.  \label{ergodic eq}
\end{equation}%
The inequality $\Delta _{A}\left( \rho \right) \geq |\rho (A_{\vec{\ell}%
})|^{2}$ then follows by using the Cauchy--Schwarz inequality and $P_{\rho
}\Omega _{\rho }=\Omega _{\rho }$. Additionally, by using again Theorem \ref%
{vonN} we see that, for all $(L,\cdots ,L)\in \vec{\ell}.\mathbb{N}^{d}$,%
\begin{equation*}
\Vert P_{\rho }^{(L)}\pi _{\rho }(A_{\vec{\ell}})\Omega _{\rho }\Vert
^{2}\geq \Vert P_{\rho }P_{\rho }^{(L)}\pi _{\rho }(A_{\vec{\ell}})\Omega
_{\rho }\Vert ^{2}=\Vert P_{\rho }\pi _{\rho }(A_{\vec{\ell}})\Omega _{\rho
}\Vert ^{2}.
\end{equation*}%
Therefore, the functional $\Delta _{A}$ is an infimum over $(L,\cdots ,L)\in 
\vec{\ell}.\mathbb{N}^{d}$ as claimed in the lemma.

Now, there is a constant $C<\infty $ such that, for all $L^{\prime }\in 
\mathbb{N}$, \ there is $L\in \mathbb{N}$ such that $|L-L^{\prime }|\leq C$
and $(L,\cdots ,L)\in \vec{\ell}.\mathbb{N}^{d}$. It follows that 
\begin{equation*}
\rho (A_{L^{\prime }}^{\ast }A_{L^{\prime }})=\rho (A_{L}^{\ast }A_{L})+%
\mathcal{O}\left( L^{-1}\right) ,
\end{equation*}%
which implies, for any diverging sequence $\{L_{n}\}_{n=1}^{\infty }$ of
natural numbers, that 
\begin{equation*}
\lim\limits_{n\rightarrow \infty }\rho (A_{L_{n}}^{\ast }A_{L_{n}})=\Vert
P_{\rho }\pi _{\rho }(A_{\vec{\ell}})\Omega _{\rho }\Vert ^{2}\in \left[
|\rho (A_{\vec{\ell}})|^{2},\Vert A\Vert ^{2}\right]
\end{equation*}%
because of (\ref{ergodic eq}). 
\end{proof}%

From\ Lemma\ \ref{Lemma1.vonN copy(2)}\ we deduce now the main properties of
the functional $\Delta _{A}$:

\begin{lemma}[Weak$^{\ast }$--upper semi--continuity, t.i., and affinity of $%
\Delta _{A}$]
\label{Lemma1.vonN copy(3)}\mbox{ }\newline
For any $A\in \mathcal{U}$, the space--averaging functional $\Delta _{A}$ on
the set $E_{\vec{\ell}}$ of $\vec{\ell}$--periodic states is affine, t.i.,
and weak$^{\ast }$--upper semi--continuous.
\end{lemma}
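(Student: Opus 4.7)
The plan is to derive all three properties directly from Lemma \ref{Lemma1.vonN copy(2)}, which gives the two representations
\[
\Delta_{A}(\rho)=\lim_{L\to\infty}\rho(A_{L}^{\ast}A_{L})
=\inf_{(L,\ldots,L)\in\vec{\ell}\cdot\mathbb{N}^{d}}\rho(A_{L}^{\ast}A_{L})
\]
for every $\rho\in E_{\vec{\ell}}$, together with the observation that, at fixed $L\in\mathbb{N}$, the map $\rho\mapsto\rho(A_{L}^{\ast}A_{L})$ is weak$^{\ast}$--continuous and affine on $\mathcal{U}^{\ast}$.

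First, for the affinity, I would fix $\lambda\in[0,1]$ and $\rho_{1},\rho_{2}\in E_{\vec{\ell}}$ and set $\rho:=\lambda\rho_{1}+(1-\lambda)\rho_{2}\in E_{\vec{\ell}}$. Since $\rho\mapsto\rho(A_{L}^{\ast}A_{L})$ is linear in $\rho$, one has for every admissible $L$
\[
\rho(A_{L}^{\ast}A_{L})=\lambda\,\rho_{1}(A_{L}^{\ast}A_{L})+(1-\lambda)\,\rho_{2}(A_{L}^{\ast}A_{L}).
\]
Because each of the three limits $L\to\infty$ exists by the first representation in Lemma \ref{Lemma1.vonN copy(2)}, passing to the limit term--wise yields $\Delta_{A}(\rho)=\lambda\Delta_{A}(\rho_{1})+(1-\lambda)\Delta_{A}(\rho_{2})$.

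Second, for the weak$^{\ast}$--upper semi--continuity on $E_{\vec{\ell}}$, I would use the second (infimum) representation in Lemma \ref{Lemma1.vonN copy(2)}. Each functional $\rho\mapsto\rho(A_{L}^{\ast}A_{L})$ is weak$^{\ast}$--continuous on $\mathcal{U}^{\ast}$, and the pointwise infimum of a family of weak$^{\ast}$--continuous (hence weak$^{\ast}$--upper semi--continuous) real--valued functionals is weak$^{\ast}$--upper semi--continuous. This immediately gives the desired semi--continuity of $\Delta_{A}$ on $E_{\vec{\ell}}$.

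Finally, for translation invariance, I would fix $x\in\mathbb{Z}^{d}$ and $\rho\in E_{\vec{\ell}}$ and check that $\rho\circ\alpha_{x}\in E_{\vec{\ell}}$, which is immediate from the abelianity of $\mathbb{Z}^{d}$. The key step is the boundary estimate
\[
\|\alpha_{x}(A_{L})-A_{L}\|\ \leq\ \frac{\bigl|(\Lambda_{L}\triangle(\Lambda_{L}+x))\cap\mathbb{Z}_{\vec{\ell}}^{d}\bigr|}{|\Lambda_{L}\cap\mathbb{Z}_{\vec{\ell}}^{d}|}\,\|A\|\ =\ \mathcal{O}(L^{-1}),
\]
since the symmetric difference $\Lambda_{L}\triangle(\Lambda_{L}+x)$ has volume $\mathcal{O}(L^{d-1})$ at fixed $x$, while $|\Lambda_{L}\cap\mathbb{Z}_{\vec{\ell}}^{d}|$ grows like $L^{d}$. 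Combining this with the $C^{\ast}$--algebra identity $\alpha_{x}(A_{L}^{\ast}A_{L})=\alpha_{x}(A_{L})^{\ast}\alpha_{x}(A_{L})$ and $\|\rho\|=1$ yields $|\rho(\alpha_{x}(A_{L}^{\ast}A_{L}))-\rho(A_{L}^{\ast}A_{L})|=\mathcal{O}(L^{-1})$, so $(\rho\circ\alpha_{x})(A_{L}^{\ast}A_{L})=\rho(A_{L}^{\ast}A_{L})+\mathcal{O}(L^{-1})$. Taking $L\to\infty$ gives $\Delta_{A}(\rho\circ\alpha_{x})=\Delta_{A}(\rho)$.

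None of the three items presents a genuine obstacle; the only point requiring mild care is the translation invariance, where one must control the boundary mismatch between $\Lambda_{L}$ and its translate. Everything else reduces to the observation that affinity and linear--evaluation continuity are inherited, respectively, by pointwise limits and by infima of families of weak$^{\ast}$--continuous functionals, both of which are already built into the representations of $\Delta_{A}$ supplied by Lemma \ref{Lemma1.vonN copy(2)}.
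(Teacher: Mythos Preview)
Your proof is correct and, for affinity and weak$^{\ast}$--upper semi--continuity, it is essentially the same as the paper's: both use the limit representation to pass affinity through and the infimum representation from Lemma \ref{Lemma1.vonN copy(2)} for upper semi--continuity (the paper spells out the standard open--set argument for $\{\rho:\Delta_{A}(\rho)<r\}$, but this is exactly your ``infimum of continuous is upper semi--continuous'').

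For translation invariance, you take a genuinely different route. The paper invokes the GNS representation via (\ref{ergodic eq}), writing $\Delta_{A}(\rho\circ\alpha_{x})=\Delta_{\alpha_{x}(A)}(\rho)=\|P_{\rho}\pi_{\rho}(\alpha_{x}(A_{\vec{\ell}}))\Omega_{\rho}\|^{2}$ and then using invariance of the projection $P_{\rho}$ to conclude equality with $\|P_{\rho}\pi_{\rho}(A_{\vec{\ell}})\Omega_{\rho}\|^{2}=\Delta_{A}(\rho)$. Your argument instead controls $\|\alpha_{x}(A_{L})-A_{L}\|$ by the relative boundary volume and passes to the limit directly, which is more elementary and avoids the GNS machinery altogether. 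One minor notational slip: since $A_{L}=A_{L,(1,\ldots,1)}$ is the full average over $\Lambda_{L}$, the correct bound is $\|\alpha_{x}(A_{L})-A_{L}\|\leq |\Lambda_{L}\triangle(\Lambda_{L}+x)|\,\|A\|/|\Lambda_{L}|$ without the intersection with $\mathbb{Z}_{\vec{\ell}}^{d}$; this does not affect your $\mathcal{O}(L^{-1})$ conclusion.
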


\begin{proof}
Because the map $\rho \mapsto \rho (A)$ is affine, $\Delta _{A}$ is also
affine. Moreover, by using (\ref{definition de A l}), (\ref{asymptotics3}),
and (\ref{ergodic eq}) we obtain, for all $x\in \mathbb{Z}^{d}$, that 
\begin{equation*}
\Delta _{A}(\rho \circ \alpha _{x})=\Delta _{\alpha _{x}\left( A\right)
}(\rho )=\Vert P_{\rho }\pi _{\rho }(\alpha _{x}\left( A_{\vec{\ell}}\right)
)\Omega _{\rho }\Vert ^{2}=\Vert P_{\rho }\pi _{\rho }(A_{\vec{\ell}})\Omega
_{\rho }\Vert ^{2}=\Delta _{A}(\rho )
\end{equation*}%
because $\rho \in E_{\vec{\ell}}$. In other words, the map $\rho \mapsto
\Delta _{A}\left( \rho \right) $ is t.i. on $E_{\vec{\ell}}$. Finally, by
Lemma \ref{Lemma1.vonN copy(2)}, $\Delta _{A}$ is an infimum over weak$%
^{\ast }$--continuous functionals and is therefore weak$^{\ast }$--upper
semi--continuous. The latter is completely standard to verify. Indeed, by
Lemma \ref{Lemma1.vonN copy(2)}, 
\begin{equation*}
M_{r}:=\{\rho \in E_{\vec{\ell}}\;:\;\Delta _{A}\left( \rho \right)
<r\}=\bigcup\limits_{(L,\cdots ,L)\in \vec{\ell}.\mathbb{N}^{d}}\{\rho \in
E_{\vec{\ell}}\;:\;\rho (A_{L}^{\ast }A_{L})<r\}
\end{equation*}%
for any constant $r\in \mathbb{R}_{0}^{+}$. Since, for any $A\in \mathcal{U}$%
, the map $\rho \mapsto \rho (A)$ is weak$^{\ast }$--continuous, $M_{r}$ is
the union of open sets which implies the weak$^{\ast }$--upper
semi--continuity of $\Delta _{A}$. 
\end{proof}%

\begin{lemma}[Locally Lipschitz continuity of the map $A\mapsto \Delta
_{A}\left( \protect\rho \right) $]
\label{Lemma1.vonN copy(5)}\mbox{ }\newline
For all $\rho \in E_{\vec{\ell}}$ and all $A,B\in \mathcal{U}$, 
\begin{equation*}
|\Delta _{A}\left( \rho \right) -\Delta _{B}\left( \rho \right) |\leq (\Vert
A\Vert +\Vert B\Vert )\Vert A-B\Vert .
\end{equation*}
\end{lemma}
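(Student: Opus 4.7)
The approach is to estimate the finite--volume difference $|\rho(A_L^* A_L)-\rho(B_L^* B_L)|$ uniformly in $L$ and then pass to the limit $L\to\infty$, which is legitimate because Lemma \ref{Lemma1.vonN copy(2)} guarantees that both limits defining $\Delta_A(\rho)$ and $\Delta_B(\rho)$ exist on $E_{\vec{\ell}}$.

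First I would use linearity of the space--averaging map $X\mapsto X_L$ to write, with $C:=A-B$, the telescoping identity
\begin{equation*}
A_L^*A_L-B_L^*B_L \;=\; A_L^*(A_L-B_L)+(A_L^*-B_L^*)B_L \;=\; A_L^*C_L+C_L^*B_L.
\end{equation*}
Applying the state $\rho$ and using the triangle inequality then gives
\begin{equation*}
|\rho(A_L^*A_L)-\rho(B_L^*B_L)| \;\leq\; |\rho(A_L^*C_L)|+|\rho(C_L^*B_L)|.
\end{equation*}

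Next I would invoke the Cauchy--Schwarz inequality for states, i.e.\ $|\rho(X^*Y)|^2\leq \rho(X^*X)\,\rho(Y^*Y)$, which holds since $\rho$ is positive and normalized. Combined with the elementary bound $\rho(X_L^*X_L)\leq \|X_L\|^2\leq \|X\|^2$ (valid for any $X\in\mathcal{U}$ because each $\alpha_x$ is a $*$--automorphism, so $\|X_L\|\leq \|X\|$), this yields
\begin{equation*}
|\rho(A_L^*C_L)|\leq \|A\|\,\|A-B\|,\qquad |\rho(C_L^*B_L)|\leq \|A-B\|\,\|B\|,
\end{equation*}
and hence
\begin{equation*}
|\rho(A_L^*A_L)-\rho(B_L^*B_L)|\;\leq\;(\|A\|+\|B\|)\,\|A-B\|
\end{equation*}
uniformly in $L$. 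Passing to the limit $L\to\infty$ on both sides via Lemma \ref{Lemma1.vonN copy(2)} yields the claim.

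There is no serious obstacle; the only point worth noting is the use of the Cauchy--Schwarz inequality for the sesquilinear form $(X,Y)\mapsto\rho(X^*Y)$ together with the fact that the space--average is norm--contracting, which together convert the difference of two quadratic expressions in $A$ and $B$ into the announced product $(\|A\|+\|B\|)\|A-B\|$.
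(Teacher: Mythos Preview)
Your proof is correct and follows essentially the same approach as the paper: telescope $A_L^*A_L-B_L^*B_L$ into two terms, bound each uniformly in $L$, and pass to the limit. The only cosmetic difference is that the paper bounds $|\rho(A_L^*C_L)|$ directly via $|\rho(X)|\leq\|X\|$ and $\|X_L\|\leq\|X\|$ rather than invoking Cauchy--Schwarz, but the resulting estimate is identical.
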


\begin{proof}
This proof is straightforward. Indeed, observe that 
\begin{equation*}
\left\Vert \rho \left( A_{L}^{\ast }A_{L}\right) -\rho \left( B_{L}^{\ast
}B_{L}\right) \right\Vert \leq \left\Vert A^{\ast }-B^{\ast }\right\Vert
\left\Vert A\right\Vert +\left\Vert B^{\ast }\right\Vert \left\Vert
A-B\right\Vert
\end{equation*}%
from which we deduce the lemma. 
\end{proof}%

We analyze now the space--averaging functional $\rho \mapsto \Delta
_{A}\left( \rho \right) $ seen as a map from the set $E_{1}$ of t.i. states
to $\mathbb{R}$.

\begin{proposition}[Continuity/Discontinuity of $\Delta _{A}$ on $E_{1}$]
\label{Lemma1.vonN copy(1)}\mbox{ }\newline
\emph{(i)} $\Delta _{A}$ is continuous on $E_{1}$ iff the affine map $\rho
\mapsto |\rho (A)|$ from $E_{1}$ to $\mathbb{C}$ is a constant map.\newline
\emph{(ii)} For all $A\in \mathcal{U}$ such that $\rho \mapsto |\rho (A)|$
is not constant, $\Delta _{A}$ is discontinuous on a weak$^{\ast }$--dense
subset of $E_{1}$.\newline
\emph{(iii)} $\Delta _{A}$ is weak$^{\ast }$--continuous on the $G_{\delta }$
weak$^{\ast }$--dense subset $\mathcal{E}_{1}$ of ergodic states in $E_{1}$.
In particular, the set of all points in $E_{1}$ where this functional is
discontinuous is meager.
\end{proposition}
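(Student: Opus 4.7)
I will rely on four ingredients: the affinity, translation invariance and weak$^{\ast }$--upper semi-continuity of $\Delta _{A}$ (Lemma \ref{Lemma1.vonN copy(3)}); the identity $\Delta _{A}(\hat{\rho})=|\hat{\rho}(A)|^{2}$ for ergodic states, which follows from Definition \ref{def:egodic} together with Theorem \ref{theorem ergodic extremal}; the weak$^{\ast }$--density of $\mathcal{E}_{1}$ as a $G_{\delta }$ subset of $E_{1}$ (Corollary \ref{lemma density of extremal points}); and the ergodic integral representation of part (iv) of the theorem, $\Delta _{A}(\rho )=\int_{\mathcal{E}_{1}}\mathrm{d}\mu _{\rho }(\hat{\rho})|\hat{\rho}(A)|^{2}$, which is proved independently by combining Theorem \ref{theorem choquet} with Lemma \ref{Corollary 4.1.18.} applied to the affine upper semi-continuous $\Delta _{A}$. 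A repeatedly used elementary fact will be that if $\rho \mapsto \rho (A)$ takes values in a circle $\{z\in \mathbb{C}:|z|=c\}$, then -- being affine and $E_{1}$ being convex -- its image is a convex subset of this circle, hence a single point.

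For (i), the direction $(\Leftarrow)$ follows from the elementary fact above: if $|\rho (A)|$ is constant then $\rho (A)=c_{0}$ is constant on $E_{1}$, so $\hat{\rho}(A)=c_{0}$ on $\mathcal{E}_{1}$, and the ergodic representation (iv) then gives $\Delta _{A}\equiv |c_{0}|^{2}$, which is manifestly weak$^{\ast }$--continuous. For $(\Rightarrow)$, continuity of $\Delta _{A}$ together with the weak$^{\ast }$--continuity of $\rho \mapsto |\rho (A)|^{2}$ and the density of $\mathcal{E}_{1}$ force $\Delta _{A}=|\cdot (A)|^{2}$ on all of $E_{1}$; but $\Delta _{A}$ is affine, so the convex map $\rho \mapsto |\rho (A)|^{2}$ must be affine, and the identity
\begin{equation*}
(1-\lambda )|\rho _{1}(A)|^{2}+\lambda |\rho _{2}(A)|^{2}-|(1-\lambda )\rho
_{1}(A)+\lambda \rho _{2}(A)|^{2}=\lambda (1-\lambda )|\rho _{1}(A)-\rho
_{2}(A)|^{2}
\end{equation*}
forces $\rho _{1}(A)=\rho _{2}(A)$ for all $\rho _{1},\rho _{2}\in E_{1}$, hence $|\rho (A)|$ is constant.

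For (iii), I show continuity of $\Delta _{A}$ at each $\hat{\rho}\in \mathcal{E}_{1}$. Upper semi-continuity (Lemma \ref{Lemma1.vonN copy(3)}) gives $\limsup \Delta _{A}(\rho _{n})\leq \Delta _{A}(\hat{\rho})=|\hat{\rho}(A)|^{2}$ for any sequence $\rho _{n}\to \hat{\rho}$. Conversely, applying Jensen's inequality (with respect to the probability measure $\mu _{\rho _{n}}$) to the convex map $z\mapsto |z|^{2}$ in the ergodic representation (iv) yields $\Delta _{A}(\rho _{n})\geq |\rho _{n}(A)|^{2}$; weak$^{\ast }$--continuity of $\rho \mapsto \rho (A)$ then gives $\liminf \Delta _{A}(\rho _{n})\geq |\hat{\rho}(A)|^{2}$. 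This proves continuity on $\mathcal{E}_{1}$, which is a dense $G_{\delta }$ in $E_{1}$, so the discontinuity set is contained in its complement and is weak$^{\ast }$--meager.

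For (ii), I argue by contradiction: suppose the continuity set of $\Delta _{A}$ contains a non-empty weak$^{\ast }$--open $U\subseteq E_{1}$. Repeating the $(\Rightarrow)$ argument of (i) locally on $U$ (using density of $\mathcal{E}_{1}\cap U$ in $U$), I obtain $\Delta _{A}=|\cdot (A)|^{2}$ on $U$; choosing a convex weak$^{\ast }$--open $V\subseteq U$, affinity of $\Delta _{A}$ forces $|\rho (A)|^{2}$ to be affine along each segment of $V$, so by the identity displayed above $\rho (A)$ is constant on $V$. The key step is then to propagate this local constancy to all of $E_{1}$: for any $\rho _{1}\in E_{1}$ and any $\rho _{0}\in V$, the segment $\rho _{\lambda }:=(1-\lambda )\rho _{0}+\lambda \rho _{1}$ lies in $V$ for all sufficiently small $\lambda >0$ by weak$^{\ast }$--openness, and affinity gives $\rho _{\lambda }(A)=(1-\lambda )\rho _{0}(A)+\lambda \rho _{1}(A)$; constancy on $V$ then forces $\rho _{0}(A)=\rho _{1}(A)$. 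Hence $\rho (A)$ is constant on $E_{1}$, contradicting the hypothesis that $|\rho (A)|$ (equivalently $\rho (A)$, by the circle argument) is not constant. The main obstacle in the proof is precisely this propagation step, but it is handled by the convex-segment trick just described.
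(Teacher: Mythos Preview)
Your proof is correct, and for part (iii) it coincides with the paper's argument (upper semi-continuity for the $\limsup$, the lower bound $\Delta_A(\rho)\geq|\rho(A)|^2$ for the $\liminf$). For the ``if'' direction of (i) you add the circle observation that $|\rho(A)|$ constant forces $\rho(A)$ constant; the paper skips this and reads off $\Delta_A\equiv c^2$ directly from the integral representation (iv), since only $|\hat\rho(A)|$ on $\mathcal{E}_1$ enters there.

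For (ii) and the ``only if'' direction of (i), your route genuinely differs. The paper is \emph{constructive}: assuming $\rho\mapsto|\rho(A)|$ is non-constant, it exhibits an explicit weak$^{\ast}$--dense set of discontinuity points, namely $\mathrm{D}:=\bigcup_{\rho\in E_1}\{\lambda\rho+(1-\lambda)\beth_\rho:\lambda\in(0,1)\}$, where $\beth_\rho$ is any state with $|\beth_\rho(A)|\neq|\rho(A)|$. Discontinuity at each point of $\mathrm{D}$ is then checked directly via an approximating ergodic sequence and the strict convexity of $|z|^2$, yielding the strict inequality (\ref{inegality utile for example}), which is reused elsewhere in the paper. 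Your argument is \emph{indirect}: you suppose the continuity set has non-empty interior, deduce $\Delta_A=|\cdot(A)|^2$ on a convex open $V$, force $\rho(A)$ constant on $V$ by affinity of $\Delta_A$, and then propagate to all of $E_1$ via the segment trick. This is clean and exploits the locally convex structure nicely, but it does not single out explicit discontinuity points; the paper's version, though slightly more hands-on, produces the concrete inequality (\ref{inegality utile for example}) that is later invoked to build examples with $\mathrm{P}_{\mathfrak{m}}^{\flat}>\mathrm{P}_{\mathfrak{m}}^{\sharp}$.
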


\begin{proof}
We start by proving the statements (i)--(ii). From Lemmata \ref{lemma
extremal.ergodic}, \ref{Lemma1.vonN copy(3)} and \ref{Corollary 4.1.18.}
combined with Theorem \ref{theorem choquet}, $\Delta _{A}$ can be decomposed
in terms of an integral on the set $\mathcal{E}_{1}$, see Theorem \ref%
{Lemma1.vonN} (iv). As a consequence, if $\rho \mapsto |\rho (A)|$ is a
constant map on $E_{1}$ then the functional $\Delta _{A}$ is clearly
constant on $E_{1}$ and hence continuous. Take now $A\in \mathcal{U}$ such
that the map $\rho \mapsto |\rho (A)|$ is not constant. Then, for any $\rho
\in E_{1}$, there is at least one state $\beth _{\rho }\in E_{1}$ such that $%
|\rho (A)|\neq |\beth _{\rho }(A)|$. For all $\rho \in E_{1}$, we define the
subset $\mathrm{I}(\rho )\subseteq E_{1}$ by%
\begin{equation*}
\mathrm{I}(\rho ):=\{\lambda \rho +(1-\lambda )\beth _{\rho }\ \mathrm{for\
any}\;\lambda \in (0,1)\}.
\end{equation*}%
Finally, let us consider the subset 
\begin{equation*}
\mathrm{D}:=\bigcup\limits_{\rho \in E_{1}}\mathrm{I}(\rho )\subseteq
E_{1}\backslash \mathcal{E}_{1}.
\end{equation*}%
By continuity of the map $\lambda \mapsto \lambda \rho $ for $\lambda \in 
\mathbb{C}$ and $\rho \in E_{1}$, the set $\mathrm{D}$ is dense in $E_{1}$
w.r.t. the weak$^{\ast }$--topology. Moreover, the map $\rho \mapsto \Delta
_{A}(\rho )$ is discontinuous at any $\rho \in \mathrm{D}$. This can be seen
as follows.

Recall that any $\rho \in \mathrm{D}$ is of the form 
\begin{equation*}
\rho =\lambda \rho _{1}+(1-\lambda )\rho _{2}
\end{equation*}%
for some $\lambda \in (0,1)$ and states $\rho _{1},\rho _{2}\in E_{1}$ with $%
|\rho _{1}(A)|\not=|\rho _{2}(A)|$. From Corollary \ref{lemma density of
extremal points}, the set $\mathcal{E}_{1}$ of extreme states is weak$^{\ast
}$--dense in $E_{1}$. So, for any $\rho \in \mathrm{D}$, there is a sequence 
$\{\hat{\rho}_{n}\}_{n=1}^{\infty }\subseteq \mathcal{E}_{1}$ of extreme
states converging w.r.t. the weak$^{\ast }$--topology to $\rho $. Then, by
Lemma \ref{lemma extremal.ergodic}, it follows that%
\begin{eqnarray}
\lim\limits_{n\rightarrow \infty }\Delta _{A}(\hat{\rho}_{n})
&=&\lim\limits_{n\rightarrow \infty }|\hat{\rho}_{n}(A)|^{2}=|\lambda \rho
_{1}(A)+(1-\lambda )\rho _{2}(A)|^{2}  \notag \\
&<&\lambda |\rho _{1}(A)|^{2}+(1-\lambda )|\rho _{2}(A)|^{2}\leq \Delta
_{A}(\rho )  \label{inegality utile for example}
\end{eqnarray}%
because $\rho \mapsto |\rho (A)|^{2}$ is weak$^{\ast }$--continuous, $%
\lambda \in (0,1)$, $\Delta _{A}(\rho )$ is affine, and $\Delta _{A}(\rho
)\geq |\rho (A)|^{2}$ for any $\rho \in E_{1}$.

We conclude this proof by showing that $\Delta _{A}$ is weak$^{\ast }$%
--continuous for any $\hat{\rho}\in \mathcal{E}_{1}$ which yields (iii), by
Corollary \ref{lemma density of extremal points}. Take $\hat{\rho}\in 
\mathcal{E}_{1}$ and consider any sequence $\{\rho _{n}\}_{n=1}^{\infty }$
of states of $E_{1}$ converging w.r.t. the weak$^{\ast }$--topology to $\hat{%
\rho}$. The functional $\Delta _{A}$ is weak$^{\ast }$--upper
semi--continuous, whereas, for all $\rho \in E_{1}$, $\Delta _{A}(\rho )\geq
|\rho (A)|^{2}$ with equality whenever $\rho \in \mathcal{E}_{1}$ (see Lemma %
\ref{lemma extremal.ergodic}). Therefore, 
\begin{equation*}
|\hat{\rho}(A)|^{2}=\Delta _{A}\left( \hat{\rho}\right) \geq
\limsup\limits_{n\rightarrow \infty }\Delta _{A}(\rho _{n})\geq
\liminf_{n\rightarrow \infty }\Delta _{A}(\rho _{n})\geq
\lim\limits_{n\rightarrow \infty }|\rho _{n}(A)|^{2}=|\hat{\rho}(A)|^{2}.
\end{equation*}%
In other words, the functional $\Delta _{A}$ is weak$^{\ast }$--continuous
on $\mathcal{E}_{1}$. 
\end{proof}%

Note that the map $\rho \mapsto |\rho (A)|^{2}$ is a weak$^{\ast }$%
--continuous convex minorant of the space--averaging functional $\Delta _{A}$%
, see Lemma \ref{Lemma1.vonN copy(2)} for $\vec{\ell}=(1,\cdots ,1)$. From
Definitions \ref{definition de deltabis}, \ref{def:egodic}, and Theorem \ref%
{theorem ergodic extremal} (or Lemma \ref{lemma extremal.ergodic}), $\Delta
_{A}\left( \hat{\rho}\right) =|\hat{\rho}(A)|^{2}$ for any extreme state $%
\hat{\rho}\in \mathcal{E}_{1}$. Since, by Corollary \ref{lemma density of
extremal points}, the set $\mathcal{E}_{1}$ of extreme states is weak$^{\ast
}$--dense in $E_{1}$, these last properties suggest that the map $\rho
\mapsto |\rho (A)|^{2}$ is the largest weak$^{\ast }$--lower
semi--continuous convex minorant of $\Delta _{A}$. This is proven in our
last lemma on the functional $\Delta _{A}$.

\begin{lemma}[$\Gamma $--regularization of $\Delta _{A}$]
\label{Lemma convex hull}\mbox{ }\newline
\index{Gamma--regularization!space--averaging functional}The $\Gamma $%
--regularization on $E_{1}$ of the functional $\Delta _{A}$ is the weak$%
^{\ast }$--continuous convex functional $\rho \mapsto |\rho (A)|^{2}$. In
particular, $\rho \mapsto |\rho (A)|^{2}$ is the largest weak$^{\ast }$%
--lower semi--continuous convex minorant of $\Delta _{A}$ on $E_{1}$.
\end{lemma}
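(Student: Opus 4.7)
Write $h(\rho) := |\rho(A)|^2$ for $\rho \in E_1$. My plan is to show the two inequalities $h \leq \Gamma_{E_1}(\Delta_A) \leq h$ on $E_1$, from which the claim follows at once (with the second sentence being just the universal property of the $\Gamma$-regularization, namely Corollary \ref{Biconjugate}). First I will check that $h$ is already a weak$^*$-continuous convex functional on $E_1$: weak$^*$-continuity is immediate since $\rho \mapsto \rho(A)$ is weak$^*$-continuous and $z \mapsto |z|^2$ is continuous on $\mathbb{C}$, while convexity follows from the convexity of $z \mapsto |z|^2$ and the affinity of $\rho \mapsto \rho(A)$.

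Next I need the pointwise bound $h(\rho) \leq \Delta_A(\rho)$ on $E_1$. The cleanest route is to invoke Theorem \ref{Lemma1.vonN}(iv), which gives the integral decomposition $\Delta_A(\rho) = \int_{\mathcal{E}_1} |\hat\rho(A)|^2\, d\mu_\rho(\hat\rho)$, and then apply Jensen's inequality (Lemma \ref{Jensen inequality}) to the convex map $z \mapsto |z|^2$ together with the barycentric identity $\rho(A) = \int \hat\rho(A)\, d\mu_\rho(\hat\rho)$ coming from Theorem \ref{theorem choquet} and Lemma \ref{Corollary 4.1.18.}. Alternatively one can cite Remark \ref{Remark lower bound delta} directly. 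Since $h$ is then a weak$^*$-lower semi-continuous (in fact continuous) convex minorant of $\Delta_A$, Corollary \ref{Biconjugate} yields $h \leq \Gamma_{E_1}(\Delta_A)$ on $E_1$.

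For the reverse inequality I exploit two facts: the equality $\Delta_A(\hat\rho) = |\hat\rho(A)|^2 = h(\hat\rho)$ for every ergodic state $\hat\rho \in \mathcal{E}_1$ (Theorem \ref{theorem ergodic extremal}, or Lemma \ref{lemma extremal.ergodic}), and the weak$^*$-density of $\mathcal{E}_1$ in $E_1$ (Corollary \ref{lemma density of extremal points}). Fix any $\rho \in E_1$ and pick a sequence $\{\hat\rho_n\}_{n=1}^\infty \subseteq \mathcal{E}_1$ with $\hat\rho_n \to \rho$ in the weak$^*$-topology. Using that $\Gamma_{E_1}(\Delta_A)$ is weak$^*$-lower semi-continuous and satisfies $\Gamma_{E_1}(\Delta_A) \leq \Delta_A$ pointwise, together with weak$^*$-continuity of $h$, I would chain
\begin{equation*}
\Gamma_{E_1}(\Delta_A)(\rho) \;\leq\; \liminf_{n\to\infty} \Gamma_{E_1}(\Delta_A)(\hat\rho_n) \;\leq\; \liminf_{n\to\infty} \Delta_A(\hat\rho_n) \;=\; \lim_{n\to\infty} h(\hat\rho_n) \;=\; h(\rho),
\end{equation*}
which closes the argument. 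No step really forms an obstacle: the proof is short precisely because the density of $\mathcal{E}_1$ in the Poulsen simplex $E_1$ together with the coincidence $\Delta_A = h$ on $\mathcal{E}_1$ makes both inequalities essentially immediate; the mild subtlety is just to orient the inequality correctly when combining lower semi-continuity of $\Gamma_{E_1}(\Delta_A)$ with the pointwise bound $\Gamma_{E_1}(\Delta_A) \leq \Delta_A$ along the approximating ergodic sequence.
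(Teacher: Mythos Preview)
Your proof is correct and follows essentially the same approach as the paper's: both establish that $h(\rho)=|\rho(A)|^2$ is a weak$^*$-continuous convex minorant of $\Delta_A$ (hence $h\leq\Gamma_{E_1}(\Delta_A)$ by Corollary~\ref{Biconjugate}), and then use the equality $\Delta_A=h$ on the dense set $\mathcal{E}_1$ together with weak$^*$-lower semi-continuity of $\Gamma_{E_1}(\Delta_A)$ to obtain the reverse inequality. Your explicit chain of inequalities along an approximating ergodic sequence is exactly what the paper summarizes in one sentence.
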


\begin{proof}
Recall that the $\Gamma $--regularization of functionals are defined by
Definition \ref{gamm regularisation}. By Lemmata \ref{lemma extremal.ergodic}
and \ref{Lemma1.vonN copy(2)} for $%
\vec{\ell}=(1,\cdots ,1)$, $\Delta _{A}\left( \hat{\rho}\right) =|\hat{\rho}%
(A)|^{2}$ for any $\hat{\rho}\in \mathcal{E}_{1}$, whereas, for all $\rho
\in E_{1}$, $\Delta _{A}\left( \rho \right) \geq |\rho (A)|^{2}$. Since the
map $\rho \mapsto |\rho (A)|^{2}$ from $E_{1}$ to $\mathbb{R}$ is a weak$%
^{\ast }$--continuous convex functional, by Corollary \ref{Biconjugate}, the 
$\Gamma $--regularization $\Gamma _{E_{1}}\left( \Delta _{A}\right) $ of $%
\Delta _{A}$ is bounded from below on $E_{1}$ by the map $\rho \mapsto |\rho
(A)|^{2}$, whereas, for any extreme state $\hat{\rho}\in \mathcal{E}_{1}$, $%
\Gamma _{E_{1}}\left( \Delta _{A}\right) \left( \hat{\rho}\right) =|\hat{\rho%
}(A)|^{2}$. Because of the weak$^{\ast }$--density of $\mathcal{E}_{1}$ in $%
E_{1}$ (Corollary \ref{lemma density of extremal points}), we deduce by
using the weak$^{\ast }$--lower semi--continuity of the functional $\Gamma
_{E_{1}}(\Delta _{A})$ that $\Gamma _{E_{1}}\left( \Delta _{A}\right) \left(
\rho \right) =|\rho (A)|^{2}$ for all $\rho \in E_{1}$. 
\end{proof}%

\section{Von Neumann entropy and entropy density of $\vec{\ell}$--periodic
states\label{section neuman entropy}}

For any local state $\rho _{\Lambda }\in E_{\Lambda }$, there exists a
unique density matrix $\mathrm{d}_{\rho _{\Lambda }}\in \mathcal{U}^{+}\cap 
\mathcal{U}_{\Lambda }$ satisfying $\rho _{\Lambda }(A)=\text{Trace}(\mathrm{%
d}_{\rho _{\Lambda }}A)$ for all $A\in \mathcal{U}_{\Lambda }$. The von
Neumann entropy is then defined, for any local state $\rho _{\Lambda }$ with
density matrix $\mathrm{d}_{\rho _{\Lambda }}$, by%
\index{Entropy density functional!von Neumann|textbf}%
\begin{equation}
S(\rho _{\Lambda }):=\mathrm{Trace}\,\left( \eta (\mathrm{d}_{\rho _{\Lambda
}})\right) \geq 0.  \label{neuman entropy}
\end{equation}%
Here, $\eta (x):=-x\log (x)$. Observe that $\mathcal{U}_{\Lambda }$ is
isomorphic to some (finite dimensional) matrix algebra $B(%
\mathbb{C}
^{N_{\Lambda }})$. The linear functional $\mathrm{Trace}:\mathcal{U}%
_{\Lambda }\rightarrow 
\mathbb{C}
$ is defined by $\mathrm{Trace}:=\mathrm{Tr}\circ \varphi $ with $\varphi $
being an arbitrary $\ast $--isomorphism $\mathcal{U}_{\Lambda }\rightarrow B(%
\mathbb{C}
^{N_{\Lambda }})$ and $\mathrm{Tr}$ being the usual trace for linear
operators on $%
\mathbb{C}
^{N_{\Lambda }}$. Note further that $\mathrm{Trace}$ does not depend on the
choice of the isomorphism $\varphi $. The von Neumann entropy has the
following well--known properties:

\begin{enumerate}
\item[\textbf{S1}] It is $%
\vec{\ell}$--periodic in the sense that, for any $\rho \in E_{\vec{\ell}}$, $%
\Lambda \in \mathcal{P}_{f}(\mathfrak{L})$, and $x\in \mathbb{Z}_{\vec{\ell}%
}^{d}$, 
\begin{equation*}
S(\rho _{\Lambda })=S(\rho _{\Lambda +x})
\end{equation*}%
with the local state $\rho _{\Lambda }$ being the restriction of the $\vec{%
\ell}$--periodic state $\rho $ on the sub--algebra $\mathcal{U}_{\Lambda
}\subseteq \mathcal{U}$ and with $\Lambda +x$ defined by (\ref{definition de
lambda translate}).

\item[\textbf{S2}] It is strongly sub--additive, i.e., for any $\Lambda
_{1},\Lambda _{2}\in \mathcal{P}_{f}(\mathfrak{L})$ and any local state $%
\rho _{\Lambda _{1}\cup \Lambda _{2}}$ on $\mathcal{U}_{\Lambda _{1}\cup
\Lambda _{2}}$, 
\begin{equation*}
S(\rho _{\Lambda _{1}\cup \Lambda _{2}})-S(\rho _{\Lambda _{1}})-S(\rho
_{\Lambda _{2}})+S(\rho _{\Lambda _{1}\cap \Lambda _{2}})\leq 0,
\end{equation*}%
see \cite[Theorems 3.7 and 10.1]{Araki-Moriya}.

\item[\textbf{S3}] It is concave, i.e., for any $\Lambda \in \mathcal{P}_{f}(%
\mathfrak{L})$, any states $\rho _{\Lambda ,1},\rho _{\Lambda ,2}$ on $%
\mathcal{U}_{\Lambda }$, and $\lambda \in \lbrack 0,1]$, 
\begin{equation*}
S(\lambda \rho _{\Lambda ,1}+(1-\lambda )\rho _{\Lambda ,2})\geq \lambda
S(\rho _{\Lambda ,1})+(1-\lambda )S(\rho _{\Lambda ,2}),
\end{equation*}%
see \cite[Proposition 6.2.28]{BrattelliRobinson}.

\item[\textbf{S4}] It is approximately convex, i.e., for any $\Lambda \in 
\mathcal{P}_{f}(\mathfrak{L})$, any states $\rho _{\Lambda ,1},\rho
_{\Lambda ,2}$ on $\mathcal{U}_{\Lambda }$, and $\lambda \in \lbrack 0,1]$, 
\begin{equation*}
S(\lambda \rho _{\Lambda ,1}+(1-\lambda )\rho _{\Lambda ,2})\leq \lambda
S(\rho _{\Lambda ,1})+(1-\lambda )S(\rho _{\Lambda ,2})+\eta (\lambda )+\eta
(1-\lambda ),
\end{equation*}%
see \cite[Proposition 6.2.28]{BrattelliRobinson}.
\end{enumerate}

\noindent \textbf{S1}--\textbf{S4} ensure the existence as well as some
basic properties of the entropy density $s:E_{\vec{\ell}}\rightarrow \mathbb{%
R}_{0}^{+}$ defined in Definition \ref{entropy.density}:

\begin{lemma}[Existence and properties of the entropy density]
\label{lemma property entropybis}\mbox{ }\newline
\index{Entropy density functional!existence}The map $\rho \mapsto s(\rho )$
from $E_{%
\vec{\ell}}$ to $\mathbb{R}$ equals 
\begin{equation*}
s(\rho ):=\lim\limits_{L\rightarrow \infty }\frac{1}{|\Lambda _{L}|}S(\rho
_{\Lambda _{L}})=\inf\limits_{(L,\cdots ,L)\in \vec{\ell}.\mathbb{N}^{d}}%
\frac{1}{|\Lambda _{L}|}S(\rho _{\Lambda _{L}}).
\end{equation*}%
It is an affine, t.i., and weak$^{\ast }$--upper semi--continuous functional.
\end{lemma}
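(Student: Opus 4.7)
The plan is to establish the four claims (the infimum formula with existence of the limit, translation invariance, affinity, and weak$^{\ast}$--upper semi--continuity) by systematically exploiting the four structural properties \textbf{S1}--\textbf{S4} of the von Neumann entropy together with the cubic geometry of the boxes $\Lambda_L$ and the $\vec{\ell}$--periodicity of the state. The fundamental tool will be the standard Fekete-type argument, but adapted to the periodic setting by restricting first to side lengths that are common multiples of $\ell_1,\dots,\ell_d$.

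First I would prove the existence of the limit together with the infimum formula. For $L$ a common multiple of $\ell_1,\dots,\ell_d$, the cube $\Lambda_{2L+1}$ can be partitioned into $3^d$ disjoint translates of $\Lambda_L$ (up to a boundary set of cardinality $\mathcal{O}(L^{d-1})$ which is itself controlled by strong subadditivity \textbf{S2} applied with empty overlap, reducing \textbf{S2} to ordinary subadditivity for disjoint regions). By \textbf{S1}, all these translated boxes carry the same entropy. Iterating this block decomposition and using the subadditivity inherited from \textbf{S2}, the sequence $\{|\Lambda_L|^{-1} S(\rho_{\Lambda_L})\}_L$ (indexed by common multiples $L$) is almost decreasing, and the standard Fekete-type argument yields convergence to its infimum. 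To extend the limit to arbitrary $L$, note that for any $L$ there exists $L'$ a common multiple of $\ell_1,\dots,\ell_d$ with $L-L' = \mathcal{O}(1)$; applying \textbf{S2} once more, $|S(\rho_{\Lambda_L}) - S(\rho_{\Lambda_{L'}})| = \mathcal{O}(L^{d-1})$ (the boundary layer entropy), which becomes negligible after division by $|\Lambda_L|$.

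Next I would treat translation invariance and affinity. For $x \in \mathbb{Z}^d$ and $\rho \in E_{\vec{\ell}}$, the state $\rho \circ \alpha_x$ remains $\vec{\ell}$--periodic, and $(\rho \circ \alpha_x)_{\Lambda_L}$ is the restriction of $\rho$ to $\Lambda_L + x$. Since $\Lambda_L + x$ and $\Lambda_L$ differ by a symmetric boundary layer of volume $\mathcal{O}(L^{d-1}|x|)$, an application of \textbf{S2} combined with \textbf{S1} shows that $|S(\rho_{\Lambda_L+x}) - S(\rho_{\Lambda_L})| = \mathcal{O}(L^{d-1})$, which vanishes in the density. For affinity, combining concavity \textbf{S3} and approximate convexity \textbf{S4} at each finite volume gives, for $\lambda \in [0,1]$ and $\rho_1,\rho_2 \in E_{\vec{\ell}}$,
\begin{equation*}
0 \;\leq\; S((\lambda\rho_1 + (1-\lambda)\rho_2)_{\Lambda_L}) - \lambda S(\rho_{1,\Lambda_L}) - (1-\lambda) S(\rho_{2,\Lambda_L}) \;\leq\; \eta(\lambda) + \eta(1-\lambda).
\end{equation*}
Dividing by $|\Lambda_L|$ and letting $L \to \infty$, the $\mathcal{O}(1)$ error on the right vanishes and one obtains $s(\lambda\rho_1 + (1-\lambda)\rho_2) = \lambda s(\rho_1) + (1-\lambda)s(\rho_2)$.

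Finally, for weak$^{\ast}$--upper semi--continuity, I would observe that for each fixed $L$, the restriction map $\rho \mapsto \rho_{\Lambda_L}$ into the finite-dimensional dual $E_{\Lambda_L}$ is weak$^{\ast}$--continuous, the density matrix $\mathrm{d}_{\rho_{\Lambda_L}}$ depends linearly and continuously on $\rho_{\Lambda_L}$, and the map $\mathrm{d} \mapsto \mathrm{Trace}(\eta(\mathrm{d}))$ is continuous on the compact set of density matrices in $\mathcal{U}_{\Lambda_L}$. Hence each functional $\rho \mapsto |\Lambda_L|^{-1} S(\rho_{\Lambda_L})$ is weak$^{\ast}$--continuous, and $s$ is their infimum over $L$ (in $\vec{\ell}\cdot\mathbb{N}$), hence weak$^{\ast}$--upper semi--continuous. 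The main obstacle will be the careful bookkeeping of the $\mathcal{O}(L^{d-1})$ boundary corrections in the subadditivity argument to pass from side lengths in $\vec{\ell}\cdot\mathbb{N}$ to arbitrary $L$ and to arbitrary translations $x \in \mathbb{Z}^d$; once this is under control, the remaining steps are essentially formal consequences of \textbf{S1}--\textbf{S4}.
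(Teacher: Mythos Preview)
Your proposal is correct and follows essentially the same route as the paper: properties \textbf{S1}--\textbf{S2} give the Fekete-type convergence to the infimum (and hence weak$^{\ast}$--upper semi--continuity as an infimum of continuous maps), \textbf{S3}--\textbf{S4} combine to give affinity, and \textbf{S2} with boundary estimates gives translation invariance. One minor slip: in your block decomposition, $3^d$ disjoint translates of $\Lambda_L$ (side $2L+1$) tile a cube of side $3(2L+1)=6L+3$, which is $\Lambda_{3L+1}$, not $\Lambda_{2L+1}$; this is inconsequential for the Fekete argument itself.
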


\begin{proof}%
This lemma is standard, see, e.g., \cite[Section 3]{Araki-Moriya}. Indeed,
the existence of the entropy density is a direct consequence of properties{\
S1--S2} because one deduces from {these properties that}%
\begin{equation*}
s(\rho )=\inf\limits_{(L,\cdots ,L)\in \vec{\ell}.\mathbb{N}^{d}}\frac{1}{%
|\Lambda _{L}|}S(\rho _{\Lambda _{L}}).
\end{equation*}%
This equation implies the weak$^{\ast }$--upper semi--continuity of the
entropy density functional $s$ as the map $\rho \mapsto S(\rho _{\Lambda
_{L}})$ is weak$^{\ast }$--continuous for any $L\in \mathbb{N}$, see similar
arguments performed in the proof of Lemma \ref{Lemma1.vonN copy(3)}. By
using the property S3, the functional $s$ is concave, whereas from S4 one
deduces that it is also convex. Therefore, $\rho \mapsto s(\rho )$ defines a
weak$^{\ast }$--upper semi--continuous affine functional on $E_{\vec{\ell}}$%
. The translation invariance of $s$ follows from the strong sub--additivity
S2 together with standard estimates.%
\end{proof}%

Observe that the entropy density functional $s$ is not weak$^{\ast }$%
--continuous but only norm continuous. These properties are well known, see,
e.g., \cite{Fannes,Fannesbis}. Nevertheless, the entropy density functional $%
s$ has still an interesting weak$^{\ast }$--\textquotedblleft
pseudo--continuity\textquotedblright\ property w.r.t. specific sequences of
ergodic states. This property is important in the following and reads as
follows:

\begin{lemma}[Weak$^{\ast }$--pseudo--continuity of the entropy density]
\label{lemma property entropy copy(1)}\mbox{ }\newline
\index{Entropy density functional!pseudo--continuity}For any t.i. state $%
\rho \in E_{1}$, there is a sequence $\{%
\hat{\rho}_{n}\}_{n=1}^{\infty }$ of ergodic states converging in the weak$%
^{\ast }$--topology to $\rho $ and such that 
\begin{equation*}
s(\rho )=\lim\limits_{n\rightarrow \infty }s(\hat{\rho}_{n}).
\end{equation*}
\end{lemma}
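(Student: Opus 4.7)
The plan is to use the explicit sequence $\{\hat{\rho}_n\}_{n=1}^\infty \subseteq \mathcal{E}_1$ already constructed in the proof of Corollary~\ref{lemma density of extremal points}; for $\vec{\ell}=(1,\dots,1)$ this sequence is obtained by first restricting $\rho\in E_1$ to $\Lambda_n$ and extending the restriction periodically (with period $2n+1$ in each direction) to a state $\tilde{\rho}_n\in E_{(2n+1,\dots,2n+1)}$ via the canonical construction of Araki and Moriya, and then averaging all translates inside $\Lambda_n$ to obtain $\hat{\rho}_n = |\Lambda_n|^{-1}\sum_{x\in\Lambda_n}\tilde{\rho}_n\circ\alpha_x\in\mathcal{E}_1$. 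We already know from the proof of Corollary~\ref{lemma density of extremal points} that $\hat{\rho}_n\to\rho$ in the weak$^\ast$-topology, so everything reduces to establishing the entropy convergence $s(\hat{\rho}_n)\to s(\rho)$.

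The argument will go through two identities. First, every $\tilde{\rho}_n\circ\alpha_x$ belongs to $E_{(2n+1,\dots,2n+1)}$ (by $(2n+1)$-periodicity of $\tilde{\rho}_n$), and so does $\hat{\rho}_n\in E_1$. Since $s$ is both affine and translation invariant on this set by Lemma~\ref{lemma property entropybis}, expanding the defining convex combination of $\hat{\rho}_n$ inside $s$ and applying translation invariance on each summand yields
\begin{equation*}
s(\hat{\rho}_n)\ =\ \frac{1}{|\Lambda_n|}\sum_{x\in\Lambda_n}s(\tilde{\rho}_n\circ\alpha_x)\ =\ s(\tilde{\rho}_n).
\end{equation*}
Second, the periodic extension $\tilde{\rho}_n$ is by construction a product state across disjoint fundamental cells of the $(2n+1)$-periodic lattice — a feature already implicit in the asymptotic factorization property $\tilde{\rho}_n(\alpha_x(A^\ast)\alpha_y(A))=\tilde{\rho}_n(\alpha_x(A^\ast))\tilde{\rho}_n(\alpha_y(A))$ invoked in the proof of Corollary~\ref{lemma density of extremal points}. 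Choosing a subsequence of $L$'s such that $(2L+1)/(2n+1)=k\in\mathbb{N}$ decomposes $\Lambda_L$ into $k^d$ disjoint translates of $\Lambda_n$, and additivity of the von Neumann entropy on product states then gives $|\Lambda_L|^{-1}S(\tilde{\rho}_{n,\Lambda_L})=|\Lambda_n|^{-1}S(\rho_{\Lambda_n})$ along this subsequence. Since the full limit $s(\tilde{\rho}_n)=\lim_L|\Lambda_L|^{-1}S(\tilde{\rho}_{n,\Lambda_L})$ exists by Lemma~\ref{lemma property entropybis}, it is forced to equal the common value on the subsequence, so $s(\tilde{\rho}_n)=|\Lambda_n|^{-1}S(\rho_{\Lambda_n})$.

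Chaining the two identities yields $s(\hat{\rho}_n)=|\Lambda_n|^{-1}S(\rho_{\Lambda_n})$, and the right-hand side converges to $s(\rho)$ by the very definition of the entropy density of the t.i.\ state $\rho\in E_1$. The main technical hurdle will be the rigorous justification of the product-across-cells structure of $\tilde{\rho}_n$: this is precisely where the evenness of $\vec{\ell}$-periodic states (Lemma~\ref{coro.even}) is indispensable, because only for even states on the CAR algebra does the canonical periodic extension genuinely decouple disjoint fermionic cells into a tensor product (for non-even states, Fermi statistics would obstruct the naive factorization). Once this decoupling is in hand, the remainder of the proof uses only additivity of the von Neumann entropy on products, the affinity and translation invariance of $s$ on each $E_{\vec{\ell}}$, and the existence of the thermodynamic limit defining $s$.
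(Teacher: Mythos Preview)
Your proposal is correct and follows essentially the same approach as the paper: it uses the same sequence $\{\hat{\rho}_n\}$ from the proof of Corollary~\ref{lemma density of extremal points}, reduces $s(\hat{\rho}_n)$ to $s(\tilde{\rho}_n)$ via affinity and translation invariance of the entropy density (Lemma~\ref{lemma property entropybis}), and then computes $s(\tilde{\rho}_n)=|\Lambda_n|^{-1}S(\rho_{\Lambda_n})$ from the additivity of the von Neumann entropy on the product structure of the periodic extension. You spell out in more detail the role of evenness in justifying the product-across-cells decomposition, which the paper takes for granted (citing \cite[Theorem 11.2.]{Araki-Moriya}), but the logical skeleton is identical.
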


\begin{proof}%
The states $\hat{\rho}_{n}$ defined by (\ref{equation toto}) with $\vec{\ell}%
=(1,\cdots ,1)$ for any $\rho \in E_{1}$ and all $n\in \mathbb{N}$ are
ergodic, i.e., $\hat{\rho}_{n}\in \mathcal{E}_{1}$, see (\ref{asymptotics2})
(extended by density of $\mathcal{U}_{0}$ to all $A\in \mathcal{U}$).
Moreover, the sequence $\{\hat{\rho}_{n}\}_{n=1}^{\infty }$ converges in the
weak$^{\ast }$--topology towards $\rho $. On the other hand, by translation
invariance and affinity of the entropy (Lemma \ref{lemma property entropybis}%
), 
\begin{equation*}
s(\hat{\rho}_{n})=s(\tilde{\rho}_{n})=\frac{1}{|\Lambda _{n}|}S(\rho
_{\Lambda _{n}})
\end{equation*}%
with $S$ being the von Neumann entropy (\ref{neuman entropy}) and $\tilde{%
\rho}_{n}$ the $(2n+1)(1,\ldots ,1)$--periodic continuation of the
restriction $\rho _{\Lambda _{n}}\in E_{\Lambda _{n}}$ of the state $\rho
\in E_{1}$ on the box $\Lambda _{n}=\Lambda _{n,(1,\ldots ,1)}$ (defined by (%
\ref{equation toto0})). Therefore the entropy density $s(\hat{\rho}_{n})$
converges to $s(\rho )$ as $n\rightarrow \infty $, see Definition \ref%
{entropy.density}.%
\end{proof}%

\section{The set $E_{1}$ as a subset of the dual space $\mathcal{W}%
_{1}^{\ast }$\label{Section state=functional on W}}

Another important thermodynamic quantity associated with any $\vec{\ell}$%
--periodic state $\rho \in E_{\vec{\ell}}$ on $\mathcal{U}$ is the energy
density $\rho \mapsto e_{\Phi }(\rho )$\ defined for any t.i. interaction $%
\Phi \in \mathcal{W}_{1}$. It is the thermodynamic limit of the internal
energy $\rho (U_{\Lambda }^{\Phi })$ (Definition \ref{definition standard
interaction} (ii)) per unit volume associated with any fixed local
interaction $\Phi $, see Definition \ref{definition energy density}. This
last definition makes sense as soon as $\Phi \in \mathcal{W}_{1}$. Indeed,
this basically follows from Lebesgue's dominated convergence theorem:

\begin{lemma}[Well--definiteness of the energy density]
\label{lemma energy density exists}\mbox{ }\newline
\index{Energy density functional}The energy density $e_{\Phi }(\rho )$ of
any $%
\vec{\ell}$--periodic state $\rho \in E_{\vec{\ell}}$ w.r.t. $\Phi \in 
\mathcal{W}_{1}$ equals $e_{\Phi }(\rho )=\rho (\mathfrak{e}_{\Phi ,\vec{\ell%
}})$ with $\mathfrak{e}_{\Phi ,\vec{\ell}}$ being defined by (\ref%
{eq:enpersite}) for any $\vec{\ell}\in \mathbb{N}^{d}$.
\end{lemma}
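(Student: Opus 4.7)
The plan is to prove the equality by first rewriting $U_{\Lambda_L}^\Phi$ as a sum of per-site contributions, then identifying these contributions with translates of a single energy observable, and finally collapsing the resulting Ces\`aro average via $\vec{\ell}$-periodicity of $\rho$. Specifically, I would start from the elementary identity
\[
U_{\Lambda_L}^\Phi \;=\; \sum_{\Lambda' \in \mathcal{P}_f(\Lambda_L)} \Phi_{\Lambda'} \;=\; \sum_{x \in \Lambda_L} \sum_{\Lambda' \subseteq \Lambda_L,\,\Lambda'\ni x} \frac{\Phi_{\Lambda'}}{|\Lambda'|},
\]
which follows by inserting $1 = |\Lambda'|^{-1} \sum_{x \in \Lambda'} 1$ and exchanging the order of summation. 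Setting $\mathfrak{e}_\Phi := \sum_{\Lambda' \ni 0}\Phi_{\Lambda'}/|\Lambda'| \in \mathcal{U}^+$ (which converges in norm since $\|\mathfrak{e}_\Phi\| \leq \|\Phi\|_{\mathcal{W}_1} < \infty$), the translation invariance of $\Phi$ yields $\sum_{\Lambda'\ni x}\Phi_{\Lambda'}/|\Lambda'| = \alpha_x(\mathfrak{e}_\Phi)$ for every $x\in\mathfrak{L}$.

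Next I would compare the truncated sum (with $\Lambda' \subseteq \Lambda_L$) to the full sum $\alpha_x(\mathfrak{e}_\Phi)$. The norm of their difference, averaged over $x \in \Lambda_L$, is bounded by
\[
\frac{1}{|\Lambda_L|}\sum_{x \in \Lambda_L}\sum_{\Lambda'\ni x,\,\Lambda'\not\subseteq\Lambda_L}\frac{\|\Phi_{\Lambda'}\|}{|\Lambda'|}
\;\leq\;\sum_{\Lambda''\ni 0}\frac{\|\Phi_{\Lambda''}\|}{|\Lambda''|}\,f_L(\Lambda''),
\]
where, after the change of variable $\Lambda'' = \Lambda' - x$ using translation invariance of $\Phi$, the quantity $f_L(\Lambda'') := |\Lambda_L|^{-1}|\{x \in \Lambda_L : \Lambda'' + x \not\subseteq \Lambda_L\}|$ satisfies $f_L(\Lambda'')\leq 1$ and tends to $0$ as $L\to\infty$ for each fixed $\Lambda''\ni 0$ (a Van Hove type boundary estimate). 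Since $\sum_{\Lambda''\ni 0}\|\Phi_{\Lambda''}\|/|\Lambda''|= \|\Phi\|_{\mathcal{W}_1}<\infty$, Lebesgue's dominated convergence theorem yields that this average difference converges to $0$ in norm. Consequently,
\[
\lim_{L\to\infty}\frac{\rho(U_{\Lambda_L}^\Phi)}{|\Lambda_L|} \;=\; \lim_{L\to\infty}\frac{1}{|\Lambda_L|}\sum_{x \in \Lambda_L}\rho\bigl(\alpha_x(\mathfrak{e}_\Phi)\bigr).
\]

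Finally, I would exploit the $\vec{\ell}$-periodicity of $\rho$. Writing any $x \in \Lambda_L$ uniquely as $x = y + z$ with $y \in \mathbb{Z}_{\vec{\ell}}^d$ and $z \in \{0,\ldots,\ell_1-1\}\times\cdots\times\{0,\ldots,\ell_d-1\}$, the identity $\rho\circ\alpha_y = \rho$ reduces $\rho(\alpha_x(\mathfrak{e}_\Phi))$ to $\rho(\alpha_z(\mathfrak{e}_\Phi))$. A straightforward counting argument (the number of $y$'s such that $y + z \in \Lambda_L$ is $|\Lambda_L|/(\ell_1\cdots\ell_d) + \mathcal{O}(L^{d-1})$) then gives
\[
\lim_{L\to\infty}\frac{1}{|\Lambda_L|}\sum_{x \in \Lambda_L}\rho\bigl(\alpha_x(\mathfrak{e}_\Phi)\bigr) \;=\; \frac{1}{\ell_1\cdots\ell_d}\sum_{z}\rho\bigl(\alpha_z(\mathfrak{e}_\Phi)\bigr) \;=\; \rho(\mathfrak{e}_{\Phi,\vec{\ell}}),
\]
by definition (\ref{eq:enpersite}) of $\mathfrak{e}_{\Phi,\vec{\ell}}$ and linearity of $\rho$. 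The main obstacle is the dominated convergence step controlling the boundary contribution uniformly in $\Lambda''$: although each term $f_L(\Lambda'')\to 0$, the rate depends on the diameter of $\Lambda''$, so the integrable majorant provided by $\|\Phi\|_{\mathcal{W}_1}$ is essential and explains the precise form of the norm in Definition \ref{definition banach space interaction}.
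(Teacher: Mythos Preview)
Your proof is correct and follows essentially the same route as the paper: both rewrite $U_{\Lambda_L}^\Phi$ as a per-site sum via $\Phi_{\Lambda'} = |\Lambda'|^{-1}\sum_{x\in\Lambda'}\Phi_{\Lambda'}$, control the boundary contribution by Lebesgue's dominated convergence with $\|\Phi\|_{\mathcal{W}_1}$ as integrable majorant, and use $\vec{\ell}$-periodicity of $\rho$ to collapse the resulting average to $\rho(\mathfrak{e}_{\Phi,\vec{\ell}})$. The only cosmetic difference is that you first establish the norm estimate before applying $\rho$ and periodicity, whereas the paper applies $\rho$ earlier and carries the $\vec{\ell}$-decomposition of lattice points through the computation from the start.
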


\begin{proof}
For any t.i. interaction $\Phi \in \mathcal{W}_{1}$, its internal energy
equals 
\begin{eqnarray}
U_{\Lambda }^{\Phi } &:&=\sum\limits_{\Lambda ^{\prime }\in \mathcal{P}_{f}(%
\mathfrak{L})}\mathbf{1}_{\left\{ \Lambda ^{\prime }\subseteq \Lambda
\right\} }\Phi _{\Lambda ^{\prime }}=\sum\limits_{x=(x_{1},\cdots
,x_{d}),\;x_{i}\in \{0,\cdots ,\ell _{i}-1\}}  \notag \\
&&\sum\limits_{y\in \Lambda \cap \mathbb{Z}_{\vec{\ell}}^{d},x+y\in \Lambda
}\ \sum\limits_{\Lambda ^{\prime }\in \mathcal{P}_{f}(\mathfrak{L}),\Lambda
^{\prime }\ni 0}\mathbf{1}_{\left\{ \Lambda ^{\prime }\subseteq (\Lambda
-x-y)\right\} }\frac{\Phi _{(x+y)+\Lambda ^{\prime }}}{|\Lambda ^{\prime }|}.
\label{eq:enpersite -1}
\end{eqnarray}%
Then, for any $\vec{\ell}$--periodic state $\rho \in E_{\vec{\ell}}$ and any 
$L\in \mathbb{R}$,%
\begin{eqnarray*}
\frac{\rho \left( U_{\Lambda _{L}}^{\Phi }\right) }{|\Lambda _{L}|} &=&\frac{%
|\Lambda _{L}\cap \mathbb{Z}_{\vec{\ell}}^{d}|}{|\Lambda _{L}|}%
\sum\limits_{x=(x_{1},\cdots ,x_{d}),\;x_{i}\in \{0,\cdots ,\ell
_{i}-1\}}\;\sum\limits_{\Lambda ^{\prime }\in \mathcal{P}_{f}(\mathfrak{L}%
),\Lambda ^{\prime }\ni 0}\rho \left( \frac{\Phi _{x+\Lambda ^{\prime }}}{%
|\Lambda ^{\prime }|}\right) \\
&&\times \frac{1}{|\Lambda _{L}\cap \mathbb{Z}_{\vec{\ell}}^{d}|}\left(
\sum\limits_{y\in \Lambda _{L}\cap \mathbb{Z}_{\vec{\ell}}^{d},x+y\in
\Lambda _{L}}\mathbf{1}_{\left\{ \Lambda ^{\prime }\subseteq (\Lambda
_{L}-x-y)\right\} }\right) .
\end{eqnarray*}%
As $\Vert \Phi \Vert _{\mathcal{W}_{1}}<\infty $, we can perform the limit $%
L\rightarrow \infty $ in this last equality by using Lebesgue's dominated
convergence theorem in order to show that 
\begin{equation*}
e_{\Phi }(\rho ):=\underset{L\rightarrow \infty }{\lim }\frac{\rho \left(
U_{\Lambda _{L}}^{\Phi }\right) }{|\Lambda _{L}|}=\rho (\mathfrak{e}_{\Phi ,%
\vec{\ell}}).
\end{equation*}%
\end{proof}%

The functional $e_{\Phi }$ can be seen either as the affine map $\rho
\mapsto e_{\Phi }(\rho )$ at fixed $\Phi \in \mathcal{W}_{1}$ or as the
linear functional $\Phi \mapsto e_{\Phi }(\rho )$ at fixed $\rho \in E_{\vec{%
\ell}}$. In this section we use the second point of view to identify the set 
$E_{1}$ of all t.i. states on $\mathcal{U}$ with a weak$^{\ast }$--compact
set of norm one functionals on the Banach space $\mathcal{W}_{1}$
(Definition \ref{definition banach space interaction}). Indeed, we define
the map $\rho \mapsto \mathbb{T}(\rho )$ from $E_{\vec{\ell}}$ to the dual
space $\mathcal{W}_{1}^{\ast }$ which associates to any $\vec{\ell}$%
--periodic state $\rho \in E_{\vec{\ell}}$ on $\mathcal{U}$ the affine
continuous functional $\mathbb{T}(\rho )\in \mathcal{W}_{1}^{\ast }$ defined
on the Banach space $\mathcal{W}_{1}$ by%
\begin{equation}
\Phi \mapsto \mathbb{T}(\rho )\,(\Phi ):=-e_{\Phi }(\rho ).  \label{eq.func}
\end{equation}%
The functional $\mathbb{T}(\rho )$ is clearly continuous and linear for any $%
\rho \in E_{\vec{\ell}}$ since 
\begin{equation*}
|e_{\Phi }(\rho )|\leq \Vert \Phi \Vert _{\mathcal{W}_{1}}\quad \mathrm{and}%
\quad e_{\left( \lambda _{1}\Phi +\lambda _{2}\Psi \right) }(\rho )=\lambda
_{1}e_{\Phi }(\rho )+\lambda _{2}e_{\Psi }(\rho )
\end{equation*}%
for any $\lambda _{1},\lambda _{2}\in \mathbb{R}$ and any $\Phi ,\Psi \in 
\mathcal{W}_{1}$, see Lemma \ref{lemma energy density exists}. Observe that
the minus sign in the definition (\ref{eq.func}) is arbitrary. It is used
only for convenience when we have to deal with tangent functionals
(Definition \ref{tangent functional}) of the pressure (\ref{map pour def
tangeant}), see Section \ref{Section Gibbs versus gen eq states}. The map $%
\mathbb{T}$ restricted on the set $E_{1}$ has some interesting topological
properties:

\begin{lemma}[Properties of $\mathbb{T}$ on $E_{1}$]
\label{lemma.T}\mbox{ }\newline
The affine map $\mathbb{T}:E_{1}\rightarrow \mathbb{T}\left( E_{1}\right)
\subseteq \mathcal{W}_{1}^{\ast }$ is a homeomorphism in the weak$^{\ast }$%
--topology and an isometry in the norm topology, i.e., $\Vert \mathbb{T(\rho
)-T(\rho }^{\prime }\mathbb{)}\Vert =\Vert \mathbb{\rho -\rho }^{\prime
}\Vert $ for all $\rho ,\rho ^{\prime }\in E_{1}$.
\end{lemma}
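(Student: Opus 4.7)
The plan is to prove the three assertions (injectivity, norm isometry, weak$^{\ast }$--homeo\-morphism) simultaneously by exhibiting, for every self--adjoint even local element of $\mathcal{U}$, a t.i. interaction in $\mathcal{W}_{1}$ that reproduces it as its energy observable and has matching norm. The main tool is the following construction: for any $A=A^{\ast }\in \mathcal{U}^{+}\cap \mathcal{U}_{\Lambda _{0}}$ with $\Lambda _{0}\in \mathcal{P}_{f}(\mathfrak{L})$ and $0\in \Lambda _{0}$, define an interaction $\Phi _{A}\in \mathcal{W}_{1}$ by setting $(\Phi _{A})_{\Lambda _{0}+x}:=\alpha _{x}(A)$ for each $x\in \mathfrak{L}$ and $(\Phi _{A})_{\Lambda }:=0$ for all $\Lambda $ that are not translates of $\Lambda _{0}$. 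This is a well--defined finite range t.i. interaction and a direct computation using Definition \ref{definition banach space interaction} gives $\Vert \Phi _{A}\Vert _{\mathcal{W}_{1}}=\Vert A\Vert $, while for every t.i. state $\rho \in E_{1}$, Lemma \ref{lemma energy density exists} combined with translation invariance of $\rho $ yields $e_{\Phi _{A}}(\rho )=\rho (A)$, hence $\mathbb{T}(\rho )(\Phi _{A})=-\rho (A)$.

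With this construction in hand, the isometry follows from a two--sided bound. On the one hand, (\ref{eq:enpersite bounded}) gives $\Vert \mathfrak{e}_{\Phi }\Vert \leq \Vert \Phi \Vert _{\mathcal{W}_{1}}$, so $|\mathbb{T}(\rho -\rho ^{\prime })(\Phi )|=|(\rho -\rho ^{\prime })(\mathfrak{e}_{\Phi })|\leq \Vert \rho -\rho ^{\prime }\Vert \,\Vert \Phi \Vert _{\mathcal{W}_{1}}$, i.e.\ $\Vert \mathbb{T}(\rho )-\mathbb{T}(\rho ^{\prime })\Vert \leq \Vert \rho -\rho ^{\prime }\Vert $. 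On the other hand, since $\rho ,\rho ^{\prime }\in E_{1}$ are both even by Lemma \ref{coro.even}, $(\rho -\rho ^{\prime })(A)=(\rho -\rho ^{\prime })(\tfrac{1}{2}(A+\sigma _{\pi }(A)))$ for every self--adjoint $A\in \mathcal{U}$, so $\Vert \rho -\rho ^{\prime }\Vert $ is the supremum of $|(\rho -\rho ^{\prime })(A)|$ over self--adjoint even $A$ with $\Vert A\Vert =1$. Approximating such an $A$ in norm by local self--adjoint even elements (density of $\mathcal{U}_{0}$ together with the continuous projections $B\mapsto \tfrac{1}{2}(B+B^{\ast })$ and $B\mapsto \tfrac{1}{2}(B+\sigma _{\pi }(B))$ on $\mathcal{U}$) and applying the construction to those local approximants gives the reverse inequality $\Vert \rho -\rho ^{\prime }\Vert \leq \Vert \mathbb{T}(\rho )-\mathbb{T}(\rho ^{\prime })\Vert $. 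Injectivity of $\mathbb{T}$ on $E_{1}$ is an immediate consequence of this isometry.

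For the homeomorphism statement, weak$^{\ast }$--continuity of $\mathbb{T}$ is trivial: if $\rho _{i}\rightarrow \rho $ weak$^{\ast }$ in $E_{1}\subseteq \mathcal{U}^{\ast }$, then $\mathbb{T}(\rho _{i})(\Phi )=-\rho _{i}(\mathfrak{e}_{\Phi })\rightarrow -\rho (\mathfrak{e}_{\Phi })=\mathbb{T}(\rho )(\Phi )$ for every $\Phi \in \mathcal{W}_{1}$ since $\mathfrak{e}_{\Phi }\in \mathcal{U}$. For continuity of $\mathbb{T}^{-1}:\mathbb{T}(E_{1})\rightarrow E_{1}$ I invoke the standard fact that a continuous bijection from a compact space onto a Hausdorff space is a homeomorphism: $E_{1}$ is weak$^{\ast }$--compact, $\mathcal{W}_{1}^{\ast }$ equipped with its weak$^{\ast }$--topology is Hausdorff, and $\mathbb{T}$ is continuous and injective, so its inverse on the image $\mathbb{T}(E_{1})$ is automatically continuous.

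The only nontrivial step is the norm--preserving construction $A\mapsto \Phi _{A}$ and the reduction of $\Vert \rho -\rho ^{\prime }\Vert $ to a supremum over self--adjoint \emph{even local} elements; once this is set up, everything else is formal. The mild obstacle is checking that the local approximants of a general self--adjoint even $A\in \mathcal{U}^{+}$ can be chosen in $\mathcal{U}^{+}\cap \mathcal{U}_{\Lambda }$ with norm not exceeding $\Vert A\Vert $ up to an arbitrarily small error, which is where I need the two averaging projections above together with the density of $\mathcal{U}_{0}$ in $\mathcal{U}$.
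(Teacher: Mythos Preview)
Your proposal is correct and follows essentially the same approach as the paper: the same construction $\Phi_A$ for local self--adjoint $A$, the same two--sided norm bound, and the same compact--to--Hausdorff argument for the homeomorphism. You are in fact slightly more careful than the paper in one respect: Definition \ref{definition standard interaction} requires $\Phi_\Lambda\in\mathcal{U}^{+}$, so the construction $\Phi_A$ only makes sense for \emph{even} self--adjoint local $A$; the paper writes it for arbitrary $A=A^{\ast}\in\mathcal{U}_0$ and leaves the reduction to even elements implicit, whereas you make it explicit via Lemma \ref{coro.even} and the projection $B\mapsto\tfrac12(B+\sigma_\pi(B))$.
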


\begin{proof}%
The functional $\mathbb{T}$ is weak$^{\ast }$--continuous because the map $%
\rho \rightarrow e_{\Phi }(\rho )$ is weak$^{\ast }$--continuous, by Lemma %
\ref{Th.en.func} (i). As $E_{1}$ is compact w.r.t. the weak$^{\ast }$%
--topology and the dual space $\mathcal{W}_{1}^{\ast }$ is Hausdorff w.r.t.
the weak$^{\ast }$--topology (cf. Corollary \ref{thm locally convex spacebis}%
), it is a homeomorphism from $E_{1}\ $to $\mathbb{T}\left( E_{1}\right) $
if it is an injection from $E_{1}$ to $\mathcal{W}_{1}^{\ast }$.

In fact, for any $\rho ,\rho ^{\prime }\in E_{\vec{\ell}}$, observe that%
\begin{equation}
\Vert \mathbb{T}(\rho )-\mathbb{T}(\rho ^{\prime })\Vert :=\underset{\Phi
\in \mathcal{W}_{1},\;\Vert \Phi \Vert _{\mathcal{W}_{1}}=1}{\sup }|\rho (%
\mathfrak{e}_{\Phi })-\rho ^{\prime }(\mathfrak{e}_{\Phi })|\leq \Vert \rho
-\rho ^{\prime }\Vert .  \label{lemma.Teq0}
\end{equation}%
Therefore, in order to show that the functional $\mathbb{T}$ on $E_{1}$ is
an isometry, which yields its injectivity, it suffices to prove the opposite
inequality.

For any $A=A^{\ast }\in \mathcal{U}_{0}$, there exists a finite range
interaction $\Phi ^{A}\in \mathcal{W}_{1}$ with $\Vert \Phi ^{A}\Vert _{%
\mathcal{W}_{1}}=\Vert A\Vert $ such that, for any $\rho \in E_{1}$, 
\begin{equation*}
e_{\Phi ^{A}}(\rho )=\rho (A).
\end{equation*}%
For $A=A^{\ast }\in \mathcal{U}_{\Lambda }$, choose, for instance, $\Phi
^{A}(\Lambda ^{\prime })=$ $\alpha _{x}(A)$ if $\Lambda ^{\prime }=\Lambda
+x $ and $\Phi ^{A}(\Lambda ^{\prime })=0$ else. It follows that, for any $%
A=A^{\ast }\in \mathcal{U}_{0}$, 
\begin{equation}
|\rho (A)-\rho ^{\prime }(A)|\leq \Vert \mathbb{T}(\rho )-\mathbb{T}(\rho
^{\prime })\Vert \;\Vert A\Vert .  \label{lemma.Teq00}
\end{equation}%
The difference $(\rho -\rho ^{\prime })$ of states $\rho ,\rho ^{\prime }\in
E_{1}$ is a Hermitian functional on a $C^{\ast }$--algebra which implies
that 
\begin{equation*}
\Vert \rho -\rho ^{\prime }\Vert =\sup\limits_{A\in \mathcal{U},\;A=A^{\ast
},\;\Vert A\Vert =1}|\rho (A)-\rho ^{\prime }(A)|.
\end{equation*}%
Since the algebra $\mathcal{U}_{0}$ of local elements is dense in $\mathcal{U%
}$, this last equality together with (\ref{lemma.Teq0}) and (\ref%
{lemma.Teq00}) implies that, for all $\rho ,\rho ^{\prime }\in E_{1}$, 
\begin{equation*}
\Vert \mathbb{T}(\rho )-\mathbb{T}(\rho ^{\prime })\Vert =\Vert \rho -\rho
^{\prime }\Vert .
\end{equation*}%
\end{proof}%

\noindent As a consequence, we can identify any t.i. state $\rho \in E_{1}$
with the continuous linear functional $\mathbb{T}\left( \rho \right) \in 
\mathcal{W}_{1}^{\ast }$.

\section{Well--definiteness of the free--energy densities on $E_{\vec{\ell}}$%
\label{section 6.6}}

Two crucial functionals related to the thermodynamics of long--range models%
\begin{equation*}
\mathfrak{m}:=(\Phi ,\{\Phi _{a}\}_{a\in \mathcal{A}},\{\Phi _{a}^{\prime
}\}_{a\in \mathcal{A}})\in \mathcal{M}_{1}
\end{equation*}%
are the free--energy density functional $f_{\mathfrak{m}}^{\sharp }$ defined
on the set $E_{\vec{\ell}}$ of $\vec{\ell}$--periodic states by%
\begin{equation*}
f_{\mathfrak{m}}^{\sharp }\left( \rho \right) :=\left\Vert \Delta
_{a,+}\left( \rho \right) \right\Vert _{1}-\left\Vert \Delta _{a,-}\left(
\rho \right) \right\Vert _{1}+e_{\Phi }(\rho )-\beta ^{-1}s(\rho )
\end{equation*}%
and the reduced free--energy density functional $g_{\mathfrak{m}}$ defined
on $E_{\vec{\ell}}$ by 
\begin{equation*}
g_{\mathfrak{m}}\left( \rho \right) :=\Vert \gamma _{a,+}\rho \left( 
\mathfrak{e}_{\Phi _{a}}+i\mathfrak{e}_{\Phi _{a}^{\prime }}\right) \Vert
_{2}^{2}-\Vert \gamma _{a,-}\rho \left( \mathfrak{e}_{\Phi _{a}}+i\mathfrak{e%
}_{\Phi _{a}^{\prime }}\right) \Vert _{2}^{2}+e_{\Phi }(\rho )-\beta
^{-1}s(\rho ),
\end{equation*}%
see Definitions \ref{Free-energy density long range} and \ref{Reduced free
energy}. Here, $\Delta _{a,\pm }\left( \rho \right) $ is defined by (\ref%
{definition of delta long range}), that is, 
\begin{equation*}
\Delta _{a,\pm }\left( \rho \right) :=\gamma _{a,\pm }\Delta _{\mathfrak{e}%
_{\Phi _{a}}+i\mathfrak{e}_{\Phi _{a}^{\prime }}}\left( \rho \right) \in
\lbrack 0,\Vert \Phi _{a}\Vert _{\mathcal{W}_{1}}^{2}+\Vert \Phi
_{a}^{\prime }\Vert _{\mathcal{W}_{1}}^{2}]
\end{equation*}%
(cf. (\ref{eq:enpersite bounded}) and Lemma \ref{Lemma1.vonN copy(2)}) with 
\begin{equation*}
\gamma _{a,\pm }:=1/2(|\gamma _{a}|\pm \gamma _{a})\in \{0,1\}
\end{equation*}%
being the negative and positive parts (\ref{remark positive negative part
gamma}) of the fixed measurable function $\gamma _{a}\in \{-1,1\}$.

Both functionals $f_{\mathfrak{m}}^{\sharp }$ and $g_{\mathfrak{m}}$ are
well--defined. Indeed, the entropy density functional $s$ as well as the
energy density functional $e_{\Phi }$ are both well--defined, see Lemmata %
\ref{lemma property entropybis} and \ref{lemma energy density exists}.
Moreover, for any $\rho \in E_{\vec{\ell}}$ and any $\mathfrak{m}\in 
\mathcal{M}_{1}$, the maps $a\mapsto \Delta _{a,\pm }(\rho )$ are measurable
and $\Vert \Delta _{a,\pm }\left( \rho \right) \Vert _{1}<\infty $:

\begin{lemma}[Long--range energy densities for $\mathfrak{m}\in \mathcal{M}%
_{1}$]
\label{delta reprentation integral}\mbox{ }\newline
\index{Space--averaging functional}The maps $\rho \mapsto \Vert \Delta
_{a,\pm }\left( \rho \right) \Vert _{1}$ from $E_{%
\vec{\ell}}$ to $\mathbb{R}_{0}^{+}$ are well--defined affine, t.i., and weak%
$^{\ast }$--upper semi--continuous functionals which equal 
\begin{equation}
\Vert \Delta _{a,\pm }(\rho )\Vert _{1}=\inf\limits_{(L,\cdots ,L)\in \vec{%
\ell}.\mathbb{N}^{d}}\left\{ \int_{\mathcal{A}}\gamma _{a,\pm }\rho (%
\mathfrak{u}_{L,a}^{\ast }\mathfrak{u}_{L,a})\mathrm{d}\mathfrak{a}\left(
a\right) \right\} \leq \left\Vert \Phi _{a}\right\Vert _{2}^{2}+\left\Vert
\Phi _{a}^{\prime }\right\Vert _{2}^{2}  \label{equality delta sympa}
\end{equation}%
for any $\rho \in E_{\vec{\ell}}$, where%
\begin{equation*}
\mathfrak{u}_{L,a}:=\frac{1}{|\Lambda _{L}|}\sum\limits_{x\in \Lambda
_{L}}\alpha _{x}(\mathfrak{e}_{\Phi _{a}}+i\mathfrak{e}_{\Phi _{a}^{\prime
}})\in \mathcal{U}.
\end{equation*}
\end{lemma}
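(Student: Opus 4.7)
The plan is to transfer the known properties of the space--averaging functional $\Delta_A$ -- specifically those of Lemmata \ref{Lemma1.vonN copy(2)} and \ref{Lemma1.vonN copy(3)} applied with $A = \mathfrak{e}_{\Phi_a} + i\mathfrak{e}_{\Phi_a^{\prime}}$, so that $\mathfrak{u}_{L,a}$ coincides with $A_L$ and $\Delta_{a,\pm}(\rho) = \gamma_{a,\pm}\Delta_A(\rho)$ -- to the integrated functional $\rho \mapsto \|\Delta_{a,\pm}(\rho)\|_1$ by standard Lebesgue integration arguments. The key enabler is the dominating integrable bound supplied by the $\mathcal{L}^2$--structure built into $\mathcal{M}_1$.

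First I would establish measurability of $a \mapsto \Delta_{a,\pm}(\rho)$. Since $a \mapsto \Phi_a,\Phi_a^{\prime}$ are Bochner--measurable as elements of $\mathcal{L}^2(\mathcal{A},\mathcal{W}_1)$ and the assignments $\Phi \mapsto \mathfrak{e}_\Phi \in \mathcal{U}$ and $\Phi \mapsto U^\Phi_{\Lambda_L}$ are norm--continuous, the map $a \mapsto \rho(\mathfrak{u}_{L,a}^{\ast}\mathfrak{u}_{L,a})$ is measurable for each $L$. Lemma \ref{Lemma1.vonN copy(2)} represents $\Delta_A(\rho)$ as a countable infimum of such quantities, so $a \mapsto \Delta_{a,\pm}(\rho)$ is measurable.

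Second, I would derive formula (\ref{equality delta sympa}) together with the upper bound. For each $(L,\dots,L)\in \vec{\ell}\cdot\mathbb{N}^d$, the pointwise estimate $\Delta_{a,\pm}(\rho)\leq \gamma_{a,\pm}\rho(\mathfrak{u}_{L,a}^{\ast}\mathfrak{u}_{L,a})$ from Lemma \ref{Lemma1.vonN copy(2)} gives one inequality after integration and infimization over $L$. For the reverse inequality I would apply Lebesgue's dominated convergence theorem, noting that $\rho(\mathfrak{u}_{L,a}^{\ast}\mathfrak{u}_{L,a}) \leq \|\mathfrak{u}_{L,a}\|^2$ is dominated uniformly in $L$ by a function of $a$ controlled by $\|\Phi_a\|_{\mathcal{W}_1}^2+\|\Phi_a^{\prime}\|_{\mathcal{W}_1}^2$, which is in $L^1(\mathcal{A})$ by definition of $\mathcal{M}_1$. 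To obtain the sharp upper bound $\|\Phi_a\|_2^2 + \|\Phi_a^{\prime}\|_2^2$, a naive triangle inequality applied to $A = B + iC$ would only give a factor $2$; the tight bound requires the decomposition $A_L^{\ast}A_L = B_L^2 + C_L^2 + i[B_L,C_L]$ together with the asymptotic abelianness (\ref{asymtptic abelian}) of even elements (note that $\mathfrak{e}_{\Phi_a},\mathfrak{e}_{\Phi_a^{\prime}} \in \mathcal{U}^+$), which forces $\rho(i[B_L,C_L])\to 0$ and yields $\Delta_A(\rho) \leq \|\mathfrak{e}_{\Phi_a}\|^2 + \|\mathfrak{e}_{\Phi_a^{\prime}}\|^2 \leq \|\Phi_a\|_{\mathcal{W}_1}^2 + \|\Phi_a^{\prime}\|_{\mathcal{W}_1}^2$ using (\ref{eq:enpersite bounded}); integration in $a$ then closes the estimate.

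Finally, affinity and translation invariance of $\|\Delta_{a,\pm}(\rho)\|_1$ in $\rho$ are inherited directly from the corresponding properties of $\Delta_A$ established in Lemma \ref{Lemma1.vonN copy(3)}, via linearity of the integral. Weak$^{\ast}$--upper semi--continuity is then read off from the infimum formula: for each fixed $L$ the functional $\rho \mapsto \int_{\mathcal{A}} \gamma_{a,\pm}\rho(\mathfrak{u}_{L,a}^{\ast}\mathfrak{u}_{L,a})\,\mathrm{d}\mathfrak{a}(a)$ is weak$^{\ast}$--continuous on $E_{\vec{\ell}}$ -- this combines the weak$^{\ast}$--continuity of $\rho \mapsto \rho(\mathfrak{u}_{L,a}^{\ast}\mathfrak{u}_{L,a})$ at each fixed $a$ with a dominated convergence argument along sequences, permissible because $E_{\vec{\ell}}$ is weak$^{\ast}$--metrizable by Theorem \ref{Metrizability} -- and the infimum of a family of weak$^{\ast}$--continuous functionals is automatically weak$^{\ast}$--upper semi--continuous. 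The only nontrivial step beyond routine measure theory is the asymptotic--abelianness argument needed to get the sharp constant in the upper bound; everything else is a direct integration of the corresponding statements for $\Delta_A$.
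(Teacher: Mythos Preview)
Your proposal is correct and follows essentially the same route as the paper: measurability via the countable--infimum representation of Lemma \ref{Lemma1.vonN copy(2)}, the infimum formula (\ref{equality delta sympa}) by monotonicity for one inequality and dominated convergence (with the integrable majorant $c(\Vert\Phi_a\Vert_{\mathcal{W}_1}^2+\Vert\Phi_a^{\prime}\Vert_{\mathcal{W}_1}^2)$) for the other, then affinity/t.i.\ from Lemma \ref{Lemma1.vonN copy(3)} and weak$^{\ast}$--upper semi--continuity from the infimum of weak$^{\ast}$--continuous functionals. The one point where you go beyond the paper is the asymptotic--abelianness argument $\rho(i[B_L,C_L])\to 0$ to obtain the sharp constant $\Vert\Phi_a\Vert_2^2+\Vert\Phi_a^{\prime}\Vert_2^2$ rather than twice that; the paper simply invokes the bound (\ref{definition of delta long range}) without spelling out why the factor $2$ from the triangle inequality can be dropped, so your remark actually fills a small gap in the exposition.
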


\begin{proof}
The maps $a\mapsto \Delta _{a,\pm }(\rho )$ are measurable and 
\begin{equation*}
\Vert \Delta _{a,\pm }\left( \rho \right) \Vert _{1}\leq \left\Vert \Phi
_{a}\right\Vert _{2}^{2}+\left\Vert \Phi _{a}^{\prime }\right\Vert
_{2}^{2}<\infty
\end{equation*}%
for any $\mathfrak{m}\in \mathcal{M}_{1}$ and $\rho \in E_{\vec{\ell}}$. It
is a consequence of (\ref{eq:enpersite bounded}) and Lemma \ref{Lemma1.vonN
copy(2)} which also implies that 
\begin{equation*}
\Vert \Delta _{a,\pm }(\rho )\Vert _{1}=\int_{\mathcal{A}}\gamma _{a,\pm
}\Delta _{a,\pm }(\rho )\mathrm{d}\mathfrak{a}\left( a\right) =\int \gamma
_{a,\pm }%
\Big\{%
\inf\limits_{(L,\cdots ,L)\in \vec{\ell}.\mathbb{N}^{d}}\rho (\mathfrak{u}%
_{L,a}^{\ast }\mathfrak{u}_{L,a})%
\Big\}%
\mathrm{d}\mathfrak{a}\left( a\right)
\end{equation*}%
for any $\rho \in E_{\vec{\ell}}$. Thus, by the monotonicity of integrals,%
\begin{equation}
\Vert \Delta _{a,\pm }(\rho )\Vert _{1}\leq \inf\limits_{(L,\cdots ,L)\in 
\vec{\ell}.\mathbb{N}^{d}}\left\{ \int_{\mathcal{A}}\gamma _{a,\pm }\rho (%
\mathfrak{u}_{L,a}^{\ast }\mathfrak{u}_{L,a})\mathrm{d}\mathfrak{a}\left(
a\right) \right\} .  \label{upper bound delta L1 norm}
\end{equation}%
By (\ref{eq:enpersite bounded}), note that, for all $L\in \mathbb{N}$, 
\begin{equation*}
\rho (\mathfrak{u}_{L,a}^{\ast }\mathfrak{u}_{L,a})\leq 2\Vert \Phi
_{a}\Vert _{\mathcal{W}_{1}}^{2}+2\Vert \Phi _{a}^{\prime }\Vert _{\mathcal{W%
}_{1}}^{2}.
\end{equation*}%
Therefore, using that 
\begin{equation*}
\Delta _{a,\pm }(\rho )=\inf\limits_{(L,\cdots ,L)\in \vec{\ell}.\mathbb{N}%
^{d}}\rho (\mathfrak{u}_{L,a}^{\ast }\mathfrak{u}_{L,a})=\underset{%
L\rightarrow \infty }{\lim }\rho (\mathfrak{u}_{L,a}^{\ast }\mathfrak{u}%
_{L,a})
\end{equation*}%
and Lebesgue's dominated convergence we obtain that 
\begin{equation*}
\Vert \Delta _{a,\pm }(\rho )\Vert _{1}=\underset{L\rightarrow \infty }{\lim 
}\int_{\mathcal{A}}\gamma _{a,\pm }\rho (\mathfrak{u}_{L,a}^{\ast }\mathfrak{%
u}_{L,a})\mathrm{d}\mathfrak{a}\left( a\right) =\liminf\limits_{L\rightarrow
\infty }\int_{\mathcal{A}}\gamma _{a,\pm }\rho (\mathfrak{u}_{L,a}^{\ast }%
\mathfrak{u}_{L,a})\mathrm{d}\mathfrak{a}\left( a\right) .
\end{equation*}%
In particular, we have that 
\begin{equation*}
\Vert \Delta _{a,\pm }(\rho )\Vert _{1}\geq \inf\limits_{(L,\cdots ,L)\in 
\vec{\ell}.\mathbb{N}^{d}}\left\{ \int_{\mathcal{A}}\gamma _{a,\pm }\rho (%
\mathfrak{u}_{L,a}^{\ast }\mathfrak{u}_{L,a})\mathrm{d}\mathfrak{a}\left(
a\right) \right\}
\end{equation*}%
which combined with (\ref{upper bound delta L1 norm}) implies Equality (\ref%
{equality delta sympa}).

By Lebesgue's dominated convergence theorem, the map 
\begin{equation*}
\rho \mapsto \int_{\mathcal{A}}\gamma _{a,\pm }\rho (\mathfrak{u}%
_{L,a}^{\ast }\mathfrak{u}_{L,a})\mathrm{d}\mathfrak{a}\left( a\right)
\end{equation*}%
is weak$^{\ast }$--continuous for any $L\in \mathbb{N}$. So, the weak$^{\ast
}$--upper semi--continuity of the maps $\rho \mapsto \Vert \Delta _{a,\pm
}\left( \rho \right) \Vert _{1}$ results from (\ref{equality delta sympa}),
see similar arguments in the proof of Lemma \ref{Lemma1.vonN copy(3)}.
Additionally, the maps $\rho \mapsto \Vert \Delta _{a,\pm }\left( \rho
\right) \Vert _{1}$ inherit the t.i. and affinity of the space--averaging
functionals $\Delta _{a,\pm }$, see again Lemma \ref{Lemma1.vonN copy(3)}.
\hfill 
\end{proof}%

Therefore, combining Lemmata \ref{lemma property entropybis} and \ref{lemma
energy density exists} with Lemma \ref{delta reprentation integral}, we
obtain the well--definiteness of the functionals $f_{\mathfrak{m}}^{\sharp }$
and $g_{\mathfrak{m}}$:

\begin{corollary}[Well--definiteness of the functionals $f_{\mathfrak{m}%
}^{\sharp }$ and $g_{\mathfrak{m}}$]
\label{corollary property free--energy density functional}\mbox{ }\newline
\emph{(i)} 
\index{Free--energy density functional!long--range}$\rho \mapsto f_{%
\mathfrak{m}}^{\sharp }\left( \rho \right) $ is a well--defined map from $E_{%
\vec{\ell}}$ to $\mathbb{R}$.\newline
\emph{(ii)} 
\index{Free--energy density functional!reduced}$\rho \mapsto g_{\mathfrak{m}%
}\left( \rho \right) $ is a well--defined map from $E_{%
\vec{\ell}}$ to $\mathbb{R}$.
\end{corollary}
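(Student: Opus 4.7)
The plan is to assemble the corollary from the three well-definiteness results already proved for the individual terms making up $f_{\mathfrak{m}}^{\sharp}$ and $g_{\mathfrak{m}}$, together with one short Bochner-style argument handling the only genuinely new quantity, namely the $L^2$-norms $\|\gamma_{a,\pm}\rho(\mathfrak{e}_{\Phi_a}+i\mathfrak{e}_{\Phi_a'})\|_2$ that appear in $g_{\mathfrak{m}}$.

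For assertion (i), I would simply invoke the previous lemmas in sequence. Lemma \ref{lemma property entropybis} gives that $s(\rho)$ is defined and finite (bounded in $[0,2C_{|\mathrm{S}|}]$ by Remark \ref{remark boundedness entropy}); Lemma \ref{lemma energy density exists} gives that $e_\Phi(\rho)=\rho(\mathfrak{e}_{\Phi,\vec\ell})$ is defined and finite with $|e_\Phi(\rho)|\leq\|\Phi\|_{\mathcal{W}_1}$ by (\ref{eq:enpersite bounded}); and Lemma \ref{delta reprentation integral} gives that $a\mapsto\Delta_{a,\pm}(\rho)$ is measurable with $\|\Delta_{a,\pm}(\rho)\|_1\leq\|\Phi_a\|_2^2+\|\Phi_a'\|_2^2<\infty$, since $\mathfrak{m}\in\mathcal{M}_1$. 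Summing the four contributions establishes (i).

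For assertion (ii), the only thing to check beyond (i) is that $a\mapsto\gamma_{a,\pm}\rho(\mathfrak{e}_{\Phi_a}+i\mathfrak{e}_{\Phi_a'})$ belongs to $L^2(\mathcal{A},\mathbb{C})$. I would first handle measurability: the linear map $\Phi\mapsto\mathfrak{e}_{\Phi,\vec\ell}$ from $\mathcal{W}_1$ to $\mathcal{U}$ is continuous by (\ref{eq:enpersite bounded}), and $\rho\in\mathcal{U}^\ast$ is continuous on $\mathcal{U}$, so the composition $\Phi\mapsto\rho(\mathfrak{e}_{\Phi,\vec\ell})$ is a continuous linear functional on $\mathcal{W}_1$. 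Since $a\mapsto\Phi_a$ and $a\mapsto\Phi_a'$ are strongly measurable by assumption on $\mathcal{L}^2(\mathcal{A},\mathcal{W}_1)$ (cf. the construction in Section \ref{Section Preliminaries}), and $\gamma_a\in\{-1,1\}$ is measurable, the composition $a\mapsto\gamma_{a,\pm}\rho(\mathfrak{e}_{\Phi_a}+i\mathfrak{e}_{\Phi_a'})$ is measurable. The $L^2$-bound then follows from the pointwise estimate
\begin{equation*}
|\gamma_{a,\pm}\rho(\mathfrak{e}_{\Phi_a}+i\mathfrak{e}_{\Phi_a'})|^2\leq 2\bigl(\|\Phi_a\|_{\mathcal{W}_1}^2+\|\Phi_a'\|_{\mathcal{W}_1}^2\bigr),
\end{equation*}
and integration against $\mathrm{d}\mathfrak{a}$, giving
\begin{equation*}
\|\gamma_{a,\pm}\rho(\mathfrak{e}_{\Phi_a}+i\mathfrak{e}_{\Phi_a'})\|_2^2\leq 2\bigl(\|\Phi_a\|_2^2+\|\Phi_a'\|_2^2\bigr)<\infty,
\end{equation*}
since $\mathfrak{m}\in\mathcal{M}_1$. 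Together with (i) this proves (ii).

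There is no serious obstacle; the only subtle point is ensuring the measurability of $a\mapsto\rho(\mathfrak{e}_{\Phi_a}+i\mathfrak{e}_{\Phi_a'})$, which reduces cleanly to composing a strongly measurable $\mathcal{W}_1$-valued map with a continuous linear functional. All quantitative bounds are already contained in (\ref{eq:enpersite bounded}), Lemma \ref{lemma property entropybis}, Lemma \ref{lemma energy density exists}, and Lemma \ref{delta reprentation integral}, so the proof is a short collation of these inputs.
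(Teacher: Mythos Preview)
Your proposal is correct and follows essentially the same approach as the paper, which simply states the corollary as an immediate combination of Lemmata \ref{lemma property entropybis}, \ref{lemma energy density exists}, and \ref{delta reprentation integral}. Your treatment is in fact slightly more explicit for part (ii), where you spell out the measurability of $a\mapsto\rho(\mathfrak{e}_{\Phi_a}+i\mathfrak{e}_{\Phi_a'})$ via continuity of $\Phi\mapsto\mathfrak{e}_{\Phi,\vec\ell}$ and strong measurability of $a\mapsto\Phi_a,\Phi_a'$; the paper leaves this implicit and records only the resulting bound (cf.~(\ref{map cool0})).
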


\chapter{Permutation Invariant Fermi Systems\label{Stoermer}}

\setcounter{equation}{0}%
By using the so--called passivity of Gibbs states (Theorem \ref%
{passivity.Gibbs})%
\index{Passivity of Gibbs states} the pressure $p_{l}=p_{l,\mathfrak{m}}$
defined by (\ref{BCS pressure}) for $l\in \mathbb{N}$ and any \emph{discrete}
model 
\begin{equation*}
\mathfrak{m}=\{\Phi \}\cup \{\Phi _{k},\Phi _{k}^{\prime }\}_{k=1}^{N}\in 
\mathcal{M}_{1}^{\mathrm{d}}\subseteq \mathcal{M}_{1}
\end{equation*}%
(see Section \ref{definition models}) can easily be bounded from below, for
all states $\rho \in E$, by%
\begin{eqnarray}
p_{l} &\geq &-\sum\limits_{k=1}^{N}%
\frac{\gamma _{k}}{|\Lambda _{l}|^{2}}\rho \left( (U_{\Lambda _{l}}^{\Phi
_{k}}+iU_{\Lambda _{l}}^{\Phi _{k}^{\prime }})^{\ast }(U_{\Lambda
_{l}}^{\Phi _{k}}+iU_{\Lambda _{l}}^{\Phi _{k}^{\prime }})\right)  \notag \\
&&-\frac{1}{|\Lambda _{l}|}\rho \left( U_{\Lambda _{l}}^{\Phi }\right) +%
\frac{1}{\beta |\Lambda _{l}|}S(\rho _{\Lambda _{l}})  \label{BCS equation 5}
\end{eqnarray}%
with $S$ being the von Neumann entropy defined by (\ref{neuman entropy}).
Furthermore, Theorem \ref{passivity.Gibbs} tells us that the equality in (%
\ref{BCS equation 5}) is only satisfied for the Gibbs equilibrium state $%
\rho _{l}=\rho _{\Lambda _{l},U_{l}}$ (\ref{Gibbs.state}), i.e., 
\begin{eqnarray}
p_{l} &=&-\sum\limits_{k=1}^{N}\frac{\gamma _{k}}{|\Lambda _{l}|^{2}}\rho
_{l}\left( (U_{\Lambda _{l}}^{\Phi _{k}}+iU_{\Lambda _{l}}^{\Phi
_{k}^{\prime }})^{\ast }(U_{\Lambda _{l}}^{\Phi _{k}}+iU_{\Lambda
_{l}}^{\Phi _{k}^{\prime }})\right)  \notag \\
&&-\frac{1}{|\Lambda _{l}|}\rho _{l}\left( U_{\Lambda _{l}}^{\Phi }\right) +%
\frac{1}{\beta |\Lambda _{l}|}S(\rho _{l}).  \label{BCS equation 4}
\end{eqnarray}%
Therefore, in order to prove Theorem \ref{BCS main theorem 1} for any
discrete models, one has to control each term in (\ref{BCS equation 5}) and (%
\ref{BCS equation 4}) as $l\rightarrow \infty $. Unfortunately, it is not
clear how to perform this program directly, even if we concentrate on
discrete long--range models. In fact, as it is originally done in \cite%
{Petz2008} and subsequently in \cite{monsieurremark} for quantum spin
systems (Remark \ref{Quantum spin systems}), we first need to understand 
\emph{permutation invariant} models $\mathfrak{m}\in \mathcal{M}_{1}$ to be
able to prove Theorem \ref{BCS main theorem 1}.

This specific class of models is defined and analyzed in Section \ref%
{Section Permutation invariant Fermi systems}. Indeed, such a study requires
a preliminary analysis, done in Section \ref{set of permutations invariant
states}, of the set $E_{\Pi }\subseteq E_{1}$ of permutation invariant
states. This corresponds to a direct extension of our results \cite%
{BruPedra1} on the strong coupling BCS--Hubbard model to general permutation
invariant systems and is given for completeness as well as a kind of
\textquotedblleft warm up\textquotedblright\ for the non--expert reader.
Among other things, we shortly establish St{\o }rmer theorem, a
non--commutative version of the celebrated de Finetti theorem for
permutation invariant states on the fermion algebra $\mathcal{U}$ as it is
proven in \cite{BruPedra1}.

\begin{remark}[Energy--entropy balance conditions]
\mbox{ }\newline
\index{Energy--entropy balance conditions}Our study of equilibrium states is
reminiscent of the work of Fannes, Spohn, and Verbeure \cite{S}, performed,
however, within a different framework. For instance, equilibrium states are
defined in \cite{S} via the energy--entropy balance conditions, also called
the correlation inequalities for quantum states (see, e.g., \cite[Appendix E]%
{BruZagrebnov8}).
\end{remark}

\section{The set $E_{\Pi }$ of permutation invariant states\label{set of
permutations invariant states}%
\index{States!permutation invariant|textbf}}

Let $\Pi $ be the set of all bijective maps from $\mathfrak{L}$ to $%
\mathfrak{L}$ which leaves all but finitely many elements invariant. It is a
group w.r.t. the composition of maps. The condition 
\begin{equation}
\alpha _{\pi }:a_{x,\mathrm{s}}\mapsto a_{\pi (x),\mathrm{s}},\quad \mathrm{s%
}\in \mathrm{S},\;x\in \mathfrak{L},  \label{definition perm automorphism}
\end{equation}%
defines a group homomorphism $\pi \mapsto \alpha _{\pi }$ from $\Pi $ to the
group of $\ast $--automorphisms of $\mathcal{U}$. The set of all permutation
invariant states is then defined by%
\begin{equation}
E_{\Pi }:=\bigcap\limits_{\pi \in \Pi ,%
\text{ }A\in \mathcal{U}}\{\rho \in \mathcal{U}^{\ast }\;:\;\rho (\mathbf{1}%
)=1,\;\rho (A^{\ast }A)\geq 0\text{\quad }\mathrm{with\ }\rho =\rho \circ
\alpha _{\pi }\}.  \label{permutation inv states}
\end{equation}%
Since obviously 
\begin{equation*}
E_{\Pi }\subseteq E_{1}\subseteq \bigcap\limits_{\vec{\ell}\in \mathbb{N}%
^{d}}E_{\vec{\ell}}\ ,
\end{equation*}%
every permutation invariant state $\rho \in E_{\Pi }$ is even, by Lemma \ref%
{coro.even}. Furthermore, $E_{\Pi }$ is clearly convex and weak$^{\ast }$%
--compact and, by the Krein--Milman theorem%
\index{Krein--Milman theorem} (Theorem \ref{theorem Krein--Millman}), it is
the weak$^{\ast }$--closure of the convex hull of the (non--empty) set $%
\mathcal{E}_{\Pi }$ of its extreme points.

The set $\mathcal{E}_{%
\vec{\ell}}$ of extreme states of $E_{\vec{\ell}}$ is characterized by
Theorem \ref{theorem ergodic extremal} and $\mathcal{E}_{\Pi }$ can likewise
be precisely characterized by St{\o }rmer theorem for permutation invariant
states on the fermion algebra $\mathcal{U}$. This theorem is a
non--commutative version of the celebrated de Finetti theorem from
(classical) probability theory and it is proven in the case of even states
on the fermion algebra $\mathcal{U}$ in \cite{BruPedra1}. Indeed, extreme
permutation invariant states $\rho \in \mathcal{E}_{\Pi }$ are product
states defined as follows.

Let $\rho _{\{0\}}\in E_{\mathcal{U}_{\{0\}}}$ be any \emph{even} state on
the one--site $C^{\ast }$--algebra $\mathcal{U}_{\{0\}}$, i.e., $\rho
_{\{0\}}=\rho _{\{0\}}\circ \sigma _{\pi }$ with $\sigma _{\pi }$ defined by
(\ref{definition of gauge}) for $\theta =\pi $. Then, from \cite[Theorem
11.2.]{Araki-Moriya}, there is a unique even state $\hat{\rho}\in E_{\Pi }$
satisfying%
\begin{equation*}
\hat{\rho}(\alpha _{x_{1}}(A_{1})\cdots \alpha _{x_{n}}(A_{n}))=\rho
_{\{0\}}(A_{1})\cdots \rho _{\{0\}}(A_{n})
\end{equation*}%
for all $A_{1}\ldots A_{n}\in \mathcal{U}_{\{0\}}$ and all $x_{1},\ldots
x_{n}\in \mathbb{Z}^{d}$ such that $x_{i}\not=x_{j}$ for $i\not=j$. The set
of all states $\hat{\rho}$ of this form, called product states, is denoted
by $E_{\otimes }$ which is nothing else but the set $\mathcal{E}_{\Pi }$ of
extreme points of $E_{\Pi }$:

\begin{theorem}[St{\o }rmer theorem, lattice CAR--algebra version]
\label{Stoermer CAR}\mbox{ }\newline
\index{St\o rmer theorem! lattice CAR--algebra version}Extreme permutation
invariant states $%
\hat{\rho}\in \mathcal{E}_{\Pi }$ are product states and conversely, i.e., $%
\mathcal{E}_{\Pi }=E_{\otimes }$.
\end{theorem}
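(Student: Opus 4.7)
The plan is to establish the two inclusions $E_{\otimes }\subseteq \mathcal{E}_{\Pi }$ and $\mathcal{E}_{\Pi }\subseteq E_{\otimes }$ separately. The first (product $\Rightarrow $ extreme) is reasonably direct and proceeds by a clustering/mixing argument. The second is the substantive part of the theorem and parallels, in spirit, the passage from Lemma \ref{ergodic.extremal} to Lemma \ref{lemma extremal.ergodic}, with the translation group replaced by the permutation group $\Pi $ and with the evenness of $\vec{\ell}$--periodic states (Lemma \ref{coro.even}) taking the place of the asymptotic abelianess (\ref{asymtptic abelian}).

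For $E_{\otimes }\subseteq \mathcal{E}_{\Pi }$, given $\hat{\rho}\in E_{\otimes }$ generated by an even one--site state $\rho _{\{0\}}$, I would first verify that $\hat{\rho}$ is strongly mixing under $\Pi $ in the sense that, for any $A\in \mathcal{U}_{\Lambda _{1}}$ and $B\in \mathcal{U}_{\Lambda _{2}}$ with $\Lambda _{1},\Lambda _{2}\in \mathcal{P}_{f}(\mathfrak{L})$, one has $\hat{\rho}(A\alpha _{\pi }(B))=\hat{\rho}(A)\hat{\rho}(B)$ as soon as $\pi \in \Pi $ maps $\Lambda _{2}$ into a region disjoint from $\Lambda _{1}$; this uses the defining factorization property of $\hat{\rho}$ together with evenness (so that the CAR sign ambiguities in \cite[Theorem 11.2]{Araki-Moriya} disappear). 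The same GNS/Cauchy--Schwarz argument used in the proof of Lemma \ref{ergodic.extremal}, applied to the permutation--average projection on $\mathcal{H}_{\hat{\rho}}$, then yields extremality of $\hat{\rho}$ in $E_{\Pi }$.

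For the reverse inclusion, let $\hat{\rho}\in \mathcal{E}_{\Pi }$ with GNS triple $(\mathcal{H}_{\hat{\rho}},\pi _{\hat{\rho}},\Omega _{\hat{\rho}})$ and unitaries $\{U_{\pi }\}_{\pi \in \Pi }$ implementing $\Pi $ as in Theorem \ref{GNS}. The von Neumann algebra $\mathfrak{R}_{\hat{\rho}}:=[\pi _{\hat{\rho}}(\mathcal{U})\cup \{U_{\pi }\}_{\pi \in \Pi }]^{\prime \prime }$ is irreducible by the same Schur--type argument as in Lemma \ref{lemma extremal.ergodic}. I would then construct an analogue $P_{\hat{\rho}}^{\Pi }$ of the projection $P_{\hat{\rho}}$ by replacing the translation averages $P^{(L)}$ of Theorem \ref{vonN} with averages over the finite permutation subgroups of $\Lambda _{L}$; the key technical point is that these averages converge strongly to the orthogonal projection onto $\Pi $--invariant vectors. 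Irreducibility together with the commutation of even local elements on disjoint supports then forces $P_{\hat{\rho}}^{\Pi }\mathfrak{R}_{\hat{\rho}}P_{\hat{\rho}}^{\Pi }=\mathbb{C}\,P_{\hat{\rho}}^{\Pi }$, whence $P_{\hat{\rho}}^{\Pi }=|\Omega _{\hat{\rho}}\rangle \langle \Omega _{\hat{\rho}}|$. Applying this rank--one collapse to $\hat{\rho}(A\alpha _{\pi }(B))$ for $A,B\in \mathcal{U}^{+}$ local and averaging over permutations sending the support of $B$ to infinity yields the factorization $\hat{\rho}(AB)=\hat{\rho}(A)\hat{\rho}(B)$ on disjoint supports in $\mathcal{U}^{+}$. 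Since $\hat{\rho}$ is even (Lemma \ref{coro.even}), it is determined by its restriction to $\mathcal{U}^{+}$, and the factorization extends via \cite[Theorem 11.2]{Araki-Moriya} to the full product structure, showing that $\hat{\rho}$ is the product state generated by $\hat{\rho}\!\upharpoonright _{\mathcal{U}_{\{0\}}}$, i.e., $\hat{\rho}\in E_{\otimes }$.

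The main obstacle, as compared to the standard St\o rmer theorem for infinite tensor products (e.g.\ quantum spin systems), is precisely the anti--commutation of odd CAR elements on disjoint sites: without it, disjoint local subalgebras would genuinely commute and the identification of $\hat{\rho}$ with a product state would be immediate once clustering is established. The remedy, and the technical core of the proof, is to restrict all clustering computations to the even subalgebra $\mathcal{U}^{+}$, where commutativity at disjoint supports does hold, and then to lift back to all of $\mathcal{U}$ using evenness of permutation invariant states. A secondary but non-trivial issue is that $\Pi $ is an infinite discrete group of a combinatorially complicated type, so verifying strong convergence of the permutation averages to the invariant projection requires more care than the cubic--box von Neumann argument of Theorem \ref{vonN}; this is where the detailed combinatorial estimate carried out in \cite{BruPedra1} enters.
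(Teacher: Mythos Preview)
The paper does not give a self-contained proof of this theorem; it simply records that the fermionic version follows from \cite[Lemmata 6.6--6.8]{BruPedra1} (proved there for $\mathrm{S}=\{\uparrow,\downarrow\}$ and easily extended to general finite $\mathrm{S}$). Your sketch is a correct outline of how that argument runs---GNS representation, irreducibility for extreme states, a mean--ergodic projection for the permutation action, reduction to $\mathcal{U}^{+}$ via evenness, and the lift back to $\mathcal{U}$ via \cite[Theorem 11.2]{Araki-Moriya}---and you yourself defer the delicate point (convergence of the permutation averages) to \cite{BruPedra1}, so your approach and the paper's coincide.
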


\noindent This theorem was proven by St{\o }rmer \cite{Stormer} for the case
of lattice quantum spin systems (cf. Remark \ref{Quantum spin systems}). Its
corresponding version for permutation invariant states on the fermion
algebra $\mathcal{U}$ follows from \cite[Lemmata 6.6--6.8]{BruPedra1}.
Observe that the proof of Theorem \ref{Stoermer CAR} is performed in \cite%
{BruPedra1} for a spin set $\mathrm{S}=\{\uparrow ,\downarrow \}$. It can
easily be extended to the general case of Theorem \ref{Stoermer CAR}.

It follows from Theorem \ref{Stoermer CAR} that all permutation invariant
states $\hat{\rho}\in \mathcal{E}_{\Pi }$ are strongly mixing which means (%
\ref{mixing}). They are, in particular, strongly clustering and thus ergodic
w.r.t. any sub--group $\mathbb{Z}_{\vec{\ell}}^{d}$ of $\mathbb{Z}^{d}$,
where $\vec{\ell}\in \mathbb{N}^{d}$. In other words, for all $\vec{\ell}\in 
\mathbb{N}^{d}$, $\mathcal{E}_{\Pi }=E_{\otimes }\subseteq \mathcal{E}_{\vec{%
\ell}}$ and the set $E_{\Pi }\subseteq E_{\vec{\ell}}$ is hence a closed
metrizable face of $E_{\vec{\ell}}$. Therefore, by using Theorem \ref%
{theorem choquet} and Theorem \ref{Stoermer CAR}, we obtain the existence of
a unique decomposition of states $\rho \in E_{\Pi }$ in terms of product
states:

\begin{theorem}[Unique decomposition of permutation invariant states]
\label{theorem choquet copy(1)}%
\index{States!permutation invariant!extremal decomposition}For any $\rho \in
E_{\Pi }$, there is a unique probability measure $\mu _{\rho }$ on $E_{\Pi }$
such that%
\begin{equation*}
\mu _{\rho }(E_{\otimes })=1%
\text{\quad and\quad }\rho =\int_{E_{\Pi }}\mathrm{d}\mu _{\rho }(\hat{\rho}%
)\;\hat{\rho}.
\end{equation*}%
Furthermore, the map $\rho \mapsto \mu _{\rho }$ is an isometry in the norm
of linear functionals, i.e., $\Vert \rho -\rho ^{\prime }\Vert =\Vert \mu
_{\rho }-\mu _{\rho }^{\prime }\Vert $ for any $\rho ,\rho ^{\prime }\in
E_{\Pi }$.
\end{theorem}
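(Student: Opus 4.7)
The plan is to reduce Theorem~\ref{theorem choquet copy(1)} to Theorem~\ref{theorem choquet}, by noting that $E_{\Pi}$ sits inside $E_{\vec{\ell}}$ (say for $\vec{\ell}=(1,\ldots,1)$) as a closed metrizable face with $\mathcal{E}_{\Pi}=E_{\otimes}\subseteq\mathcal{E}_{\vec{\ell}}$, a fact already established in the paragraph preceding the theorem via Theorem~\ref{Stoermer CAR}. The desired measure $\mu_{\rho}$ will simply be the ergodic Choquet measure provided by Theorem~\ref{theorem choquet}, once one verifies that it concentrates on $E_{\otimes}$.

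Concretely, for any $\rho\in E_{\Pi}\subseteq E_{1}$, Theorem~\ref{theorem choquet} supplies a unique probability measure $\mu_{\rho}$ on $E_{1}$ with $\mu_{\rho}(\mathcal{E}_{1})=1$ and barycenter $\rho$, together with the isometry $\Vert\rho-\rho'\Vert=\Vert\mu_{\rho}-\mu_{\rho'}\Vert$. The heart of the proof is to show $\mu_{\rho}(E_{\otimes})=1$. For this I would use that $E_{\Pi}$, being a closed metrizable face of the Choquet simplex $E_{1}$, is itself a metrizable Choquet simplex by Theorem~\ref{theorem choquet bis copy(1)}. Applying the Choquet representation theorem inside $E_{\Pi}$ produces a probability measure $\nu_{\rho}$ on $E_{\Pi}$ concentrated on $\mathcal{E}_{\Pi}=E_{\otimes}$ (by Theorem~\ref{Stoermer CAR}) with barycenter $\rho$. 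Regarded as a Borel measure on $E_{1}$, $\nu_{\rho}$ is supported on $E_{\otimes}\subseteq\mathcal{E}_{1}$ and represents $\rho$; the uniqueness clause of Theorem~\ref{theorem choquet} forces $\nu_{\rho}=\mu_{\rho}$, whence $\mu_{\rho}$ is supported on $E_{\otimes}$, as required.

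Uniqueness within $E_{\Pi}$ then follows immediately: any probability measure on $E_{\Pi}$ supported in $E_{\otimes}$ and representing $\rho$ extends trivially to a measure on $E_{1}$ meeting the hypotheses of Theorem~\ref{theorem choquet}, and must therefore equal $\mu_{\rho}$. The isometry statement carries over directly, because measures concentrated on $E_{\otimes}\subseteq E_{\Pi}$ have the same total-variation norm whether viewed as elements of $C(E_{\Pi})^{\ast}$ or of $C(E_{1})^{\ast}$; hence $\Vert\mu_{\rho}-\mu_{\rho'}\Vert$ is unambiguous and coincides with the value given by Theorem~\ref{theorem choquet}, namely $\Vert\rho-\rho'\Vert$.

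The main obstacle is really just the support step $\mu_{\rho}(E_{\otimes})=1$, and the cleanest route---as above---is the double application of the Choquet decomposition (once on $E_{1}$, once on $E_{\Pi}$) combined with uniqueness on $E_{1}$. This avoids any ad hoc face-based measure-theoretic argument, and exploits exactly the structural facts (closed metrizable face, simplex structure, St\o rmer identification of extreme points) recalled immediately before the theorem.
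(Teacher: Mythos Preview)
Your proposal is correct and follows essentially the same route as the paper: the paper derives the theorem directly from Theorem~\ref{theorem choquet} combined with Theorem~\ref{Stoermer CAR}, using (as stated in the paragraph immediately preceding the theorem) that $E_{\Pi}$ is a closed metrizable face of $E_{\vec{\ell}}$ with $\mathcal{E}_{\Pi}=E_{\otimes}\subseteq\mathcal{E}_{\vec{\ell}}$. Your ``double Choquet'' argument to pin down the support on $E_{\otimes}$ is a clean way to spell out what the paper leaves as a one-line consequence of the face property; the uniqueness and isometry parts are inherited verbatim from Theorem~\ref{theorem choquet}, exactly as you indicate.
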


From Theorem \ref{Thm Poulsen simplex}, for all $\vec{\ell}\in \mathbb{N}%
^{d} $, the sets $E_{\vec{\ell}}$ are affinely homeomorphic to the Poulsen
simplex, but the set $E_{\Pi }$ of all permutation invariant states do not
share this property. Indeed, $E_{\Pi }$ is a Bauer simplex (Definition \ref%
{gamm regularisation copy(2)}), i.e., a simplex whose set of extreme points
is closed:

\begin{theorem}[$E_{\Pi }$ is a Bauer simplex]
\label{Thm Poulsen simplex copy(1)}\mbox{ }\newline
\index{States!permutation invariant!Bauer simplex}The set $E_{\Pi }$ is a
Bauer simplex%
\index{Simplex!Bauer}. In particular, the map $\rho \mapsto \mu _{\rho }$ of
Theorem \ref{theorem choquet copy(1)} from $E_{\Pi }$ to the set $M_{1}^{+}(%
\mathcal{E}_{\Pi })=M_{1}^{+}(E_{\otimes })$ of probability measures on $%
\mathcal{E}_{\Pi }=E_{\otimes }$ is an affine homeomorphism w.r.t. the weak$%
^{\ast }$--topologies on $E_{\Pi }$ and $M_{1}^{+}(\mathcal{E}_{\Pi })$.
\end{theorem}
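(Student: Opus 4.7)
The plan is to verify separately the two defining properties of a Bauer simplex: (a) $E_{\Pi }$ is a Choquet simplex, and (b) its extreme boundary $\mathcal{E}_{\Pi }$ is weak$^{\ast }$--closed. The affine homeomorphism statement will then follow from standard arguments applied to the uniquely determined Choquet decomposition given by Theorem \ref{theorem choquet copy(1)}.

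\noindent For (a), the key input is already available: Theorem \ref{theorem choquet copy(1)} asserts that every $\rho \in E_{\Pi }$ has a \emph{unique} representing probability measure supported on $E_{\otimes }=\mathcal{E}_{\Pi }$. By the standard characterization of simplices via uniqueness of the maximal representing measure (Definition \ref{gamm regularisation copy(1)} and Theorem \ref{theorem choquet bis copy(1)}), this already forces $E_{\Pi }$ to be a Choquet simplex.

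\noindent For (b), the strategy is to parametrize $E_{\otimes }$ by the set $E_{\{0\}}^{+}$ of even states on the one--site algebra $\mathcal{U}_{\{0\}}$ and to exhibit a continuous bijection. Define $\Theta :E_{\{0\}}^{+}\rightarrow E_{\otimes }$ by $\Theta (\rho _{\{0\}}):=\hat{\rho}$, with $\hat{\rho}$ being the unique even product extension provided (via \cite[Theorem 11.2.]{Araki-Moriya}) just before Theorem \ref{Stoermer CAR}. The set $E_{\{0\}}^{+}$ is weak$^{\ast }$--closed in the state space of the finite--dimensional algebra $\mathcal{U}_{\{0\}}$, hence weak$^{\ast }$--compact. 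Moreover, $\Theta $ is weak$^{\ast }$--continuous because, on the dense $\ast $--subalgebra $\mathcal{U}_{0}$, the value $\Theta (\rho _{\{0\}})(\alpha _{x_{1}}(A_{1})\cdots \alpha _{x_{n}}(A_{n}))$ is a polynomial in the values $\rho _{\{0\}}(A_{i})$. Injectivity of $\Theta $ is immediate, and surjectivity onto $E_{\otimes }$ is the definition. Hence $E_{\otimes }=\Theta (E_{\{0\}}^{+})$ is weak$^{\ast }$--compact and in particular closed in the Hausdorff space $E_{\Pi }$. Combined with Theorem \ref{Stoermer CAR} ($\mathcal{E}_{\Pi }=E_{\otimes }$), this yields closedness of the extreme boundary. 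The main (if mild) obstacle here is a small approximation argument: if $\hat{\rho}_{n}\in E_{\otimes }$ converges weak$^{\ast }$ to some $\rho \in E_{\Pi }$, one must check that the restrictions $\hat{\rho}_{n,\{0\}}$ converge weak$^{\ast }$ to an even state $\rho _{\{0\}}\in E_{\{0\}}^{+}$ (clear from continuity of restriction and the fact that $E_{\{0\}}^{+}$ is closed) and then identify $\rho =\Theta (\rho _{\{0\}})$ by evaluating both sides on a dense family of monomials $\alpha _{x_{1}}(A_{1})\cdots \alpha _{x_{n}}(A_{n})$ with pairwise distinct $x_{i}$.

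\noindent Finally, for the affine homeomorphism claim, the map $\rho \mapsto \mu _{\rho }$ is affine by the uniqueness clause of Theorem \ref{theorem choquet copy(1)}, and weak$^{\ast }$--continuous because, for every $f\in C(E_{\Pi })$, the integral $\int f\,\mathrm{d}\mu _{\rho }$ is determined, via the Stone--Weierstrass argument already used in the proof of Lemma \ref{choquet unique}, by weak$^{\ast }$--continuous functionals of $\rho $. It is a bijection onto $M_{1}^{+}(\mathcal{E}_{\Pi })$: surjectivity follows because every probability measure on the (now closed, hence compact) set $\mathcal{E}_{\Pi }$ has a barycenter in $E_{\Pi }$ (Theorem \ref{thm barycenter}), and injectivity is again the uniqueness in Theorem \ref{theorem choquet copy(1)}. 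Since both $E_{\Pi }$ and $M_{1}^{+}(\mathcal{E}_{\Pi })$ are weak$^{\ast }$--compact and Hausdorff, a continuous bijection between them is automatically a homeomorphism, completing the proof.
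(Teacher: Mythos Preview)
Your proof is correct, but it proceeds along a different route than the paper's. The paper's argument is considerably shorter: it observes that $E_{\Pi }$ is a closed face of the Choquet simplex $E_{\vec{\ell}}$ (since $\mathcal{E}_{\Pi }=E_{\otimes }\subseteq \mathcal{E}_{\vec{\ell}}$), hence itself a simplex; then asserts that $E_{\otimes }$ is ``obviously closed'' in the weak$^{\ast }$--topology; and finally invokes the general Bauer theorem (Theorem~\ref{theorem Bauer}) to obtain the affine homeomorphism for free. By contrast, you (a) derive the simplex property directly from the uniqueness clause of Theorem~\ref{theorem choquet copy(1)} via Theorem~\ref{theorem choquet bis copy(1)}, (b) supply an explicit parametrization $\Theta :E_{\{0\}}^{+}\rightarrow E_{\otimes }$ to prove closedness of the extreme boundary, and (c) argue the affine homeomorphism by hand. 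Your approach is more self-contained and makes the ``obvious'' closedness in (b) genuinely transparent; the paper's approach is more economical because it leans on the ambient structure of $E_{\vec{\ell}}$ and on Bauer's theorem.

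One small caveat on your step (c): the phrase ``determined \dots\ by weak$^{\ast }$--continuous functionals of $\rho $'' via the argument of Lemma~\ref{choquet unique} is a bit delicate, since that lemma expresses $\mu _{\rho }(\hat{A}_{1}\cdots \hat{A}_{n})$ as a \emph{limit} $\lim_{L}\rho ((A_{1})_{L}\cdots (A_{n})_{L})$, and a pointwise limit of continuous maps need not be continuous. The cleanest fix is to run the compact--Hausdorff argument in the other direction: the barycenter map $\mu \mapsto x_{\mu }$ from $M_{1}^{+}(\mathcal{E}_{\Pi })$ to $E_{\Pi }$ is weak$^{\ast }$--continuous by Theorem~\ref{thm barycenter}, bijective by your argument, and hence a homeomorphism. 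Alternatively, for local $A_{1},\ldots ,A_{n}$ you can choose translates with disjoint supports so that $\mu _{\rho }(\hat{A}_{1}\cdots \hat{A}_{n})=\rho (\alpha _{x_{1}}(A_{1})\cdots \alpha _{x_{n}}(A_{n}))$ is a single weak$^{\ast }$--continuous evaluation, and then pass to all of $C(\mathcal{E}_{\Pi })$ by uniform approximation. Either way the conclusion stands; this is the only point where your write-up would benefit from one extra sentence, or simply from citing Theorem~\ref{theorem Bauer} as the paper does.
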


\begin{proof}
As explained above, for all $%
\vec{\ell}\in \mathbb{N}^{d}$, $E_{\Pi }$ is a closed face of $E_{\vec{\ell}%
} $ (and thus a closed simplex) with set $\mathcal{E}_{\Pi }$ of extreme
points being the set $E_{\otimes }$ of product states, i.e., $\mathcal{E}%
_{\Pi }=E_{\otimes }\subseteq \mathcal{E}_{\vec{\ell}}$, see Theorem \ref%
{Stoermer CAR}. Since the set $E_{\otimes }$ is obviously closed in the weak$%
^{\ast }$--topology, it is a Bauer simplex which, combined with Theorem \ref%
{theorem Bauer}, implies the statement. 
\end{proof}%

Therefore, the simplex $E_{\Pi }$ has a much simpler geometrical structure
than all simplices $\{E_{\vec{\ell}}\}_{\vec{\ell}\in \mathbb{N}^{d}}$ and
it is easier to use in practice, see, e.g., \cite{BruPedra1}. For instance,
for any fixed element $A$ of the one--site $C^{\ast }$--algebra $\mathcal{U}%
_{\{0\}}$, the space--averaging functional $\Delta _{A}$ described in
Sections \ref{Section space averaging} and \ref{Section properties of delta}
has a very explicit representation on the Bauer simplex $E_{\Pi }$:

\begin{lemma}[The space--averaging functional $\Delta _{A}$ on $E_{\Pi }$]
\label{space--averaging functional perm inv}\mbox{ }\newline
\index{Space--averaging functional!permutation invariant}At fixed $A\in 
\mathcal{U}_{\{0\}}$, the restriction on $E_{\Pi }$ of the functional $%
\Delta _{A}$ equals, for any $x\in \mathbb{Z}^{d}\backslash \{0\}$, the weak$%
^{\ast }$--continuous affine map $\rho \mapsto \rho (A^{\ast }\alpha
_{x}(A)) $ from $E_{\Pi }$ to $\mathbb{R}_{0}^{+}$.
\end{lemma}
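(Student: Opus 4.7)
The plan is to exploit the permutation invariance of $\rho\in E_{\Pi}$ to reduce the double sum defining $\rho(A_{L}^{\ast}A_{L})$ to essentially two terms, a diagonal and an off--diagonal piece, and then take $L\to\infty$.

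The key ingredient is the identity
\begin{equation*}
\rho(\alpha_{y}(A^{\ast})\alpha_{z}(A))=\rho(A^{\ast}\alpha_{x}(A))
\end{equation*}
valid for any $\rho\in E_{\Pi}$, any fixed $x\in\mathbb{Z}^{d}\setminus\{0\}$, and any $y,z\in\mathfrak{L}$ with $y\neq z$. First I would establish this by picking a finitary bijection $\pi\in\Pi$ with $\pi(y)=0$ and $\pi(z)=x$, which exists since $y\neq z$ and $0\neq x$ (a product of at most two transpositions on a finite subset containing $\{y,z,0,x\}$ suffices). Since $A\in\mathcal{U}_{\{0\}}$, the associated automorphism from (\ref{definition perm automorphism}) satisfies $\alpha_{\pi}(\alpha_{y}(A^{\ast})\alpha_{z}(A))=\alpha_{\pi(y)}(A^{\ast})\alpha_{\pi(z)}(A)=A^{\ast}\alpha_{x}(A)$, and permutation invariance $\rho\circ\alpha_{\pi}=\rho$ gives the identity.

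Next I would expand $A_{L}$ using (\ref{definition de A L}) with $\vec{\ell}=(1,\ldots,1)$ and split the resulting double sum:
\begin{equation*}
\rho(A_{L}^{\ast}A_{L})=\frac{1}{|\Lambda_{L}|^{2}}\sum_{y\in\Lambda_{L}}\rho(\alpha_{y}(A^{\ast}A))+\frac{1}{|\Lambda_{L}|^{2}}\sum_{\substack{y,z\in\Lambda_{L}\\ y\neq z}}\rho(\alpha_{y}(A^{\ast})\alpha_{z}(A)).
\end{equation*}
Since $E_{\Pi}\subseteq E_{1}$, the first (diagonal) sum equals $|\Lambda_{L}|^{-1}\rho(A^{\ast}A)$; by the identity above, the second sum equals $(1-|\Lambda_{L}|^{-1})\rho(A^{\ast}\alpha_{x}(A))$. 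Sending $L\to\infty$, the diagonal contribution vanishes and Lemma \ref{Lemma1.vonN copy(2)} yields $\Delta_{A}(\rho)=\rho(A^{\ast}\alpha_{x}(A))$, independently of $x\neq 0$.

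Finally, weak$^{\ast}$--continuity and affinity of $\rho\mapsto\rho(A^{\ast}\alpha_{x}(A))$ are immediate from the definition of the weak$^{\ast}$--topology applied to the fixed element $A^{\ast}\alpha_{x}(A)\in\mathcal{U}$, and non-negativity follows from Lemma \ref{Lemma1.vonN copy(2)} (which already gives $\Delta_{A}(\rho)\geq 0$). There is no real obstacle in this argument; the only care needed is the construction of the finitary permutation $\pi$, which is elementary.
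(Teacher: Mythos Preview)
Your proof is correct and is precisely the ``elementary combinatorics'' the paper alludes to (the paper gives no explicit argument, only a reference). The permutation-invariance identity for the off-diagonal terms and the diagonal/off-diagonal splitting of the double sum is exactly what is meant.
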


\begin{proof}
This lemma follows from elementary combinatorics, see, e.g., \cite[Lemma 6.2]%
{BruPedra1}.%
\end{proof}%

Permutation invariance is, however,\ a too restrictive condition in general.
Indeed, most of models coming from Physics are only translation invariant.
In particular, the general set of states to be considered in these cases is
the Poulsen simplex (up to an affine homeomorphism), which is in a sense 
\emph{complementary} to the Bauer simplices%
\index{Simplex!Bauer}%
\index{Simplex!Poulsen}, see \cite[p. 164]{Alfsen} or \cite[Section 5]%
{Lindenstrauss-etal}.

\section{Thermodynamics of permutation invariant Fermi systems\label{Section
Permutation invariant Fermi systems}}

\emph{Permutation invariant} interactions form a subset of the real Banach
space $\mathcal{W}_{1}$ of all t.i. interactions $\Phi $, see Definition \ref%
{definition banach space interaction}. They are naturally defined as follows:

\begin{definition}[Permutation invariant interactions]
\mbox{ }\newline
\index{Interaction!permutation invariant}A t.i. interaction $\Phi \in 
\mathcal{W}_{1}$ is permutation invariant if $\Phi _{\Lambda }=0$ whenever $%
|\Lambda |\not=1$.
\end{definition}

\noindent \emph{Permutation invariant} Fermi systems $\mathfrak{m}\in 
\mathcal{M}_{1}$ with long--range interactions (see Definition \ref%
{definition M1bis}) are then defined from permutation invariant interactions
as follows:

\begin{definition}[Permutation invariant models]
\label{Definition permutation inv models}\mbox{ }\newline
\index{Long--range models!permutation invariant}A long--range model $%
\mathfrak{m}:=(\Phi ,\{\Phi _{a}\}_{a\in \mathcal{A}},\{\Phi _{a}^{\prime
}\}_{a\in \mathcal{A}})\in \mathcal{M}_{1}$ is permutation invariant
whenever the interactions $\Phi $, $\Phi _{a}$ and $\Phi _{a}^{\prime }$ are
permutation invariant for all (a.e.) $a\in \mathcal{A}$.
\end{definition}

If the model $\mathfrak{m}\in \mathcal{M}_{1}$ is permutation invariant then
the corresponding internal energies $U_{l}$ defined for $l\in \mathbb{N}$ in
Definition \ref{definition BCS-type model} are invariant w.r.t. permutations
of lattice sites inside the boxes $\Lambda _{l}$. More precisely: For all $%
l\in \mathbb{N}$ and all $\pi \in \Pi $ such that $\pi |_{\mathfrak{L}%
\backslash \Lambda _{l}}=\mathrm{id}|_{\mathfrak{L}\backslash \Lambda _{l}}$%
, $\alpha _{\pi }(U_{l})=U_{l}$. Here, $\mathrm{id}\in \Pi $ is the neutral
element of the group $\Pi $, i.e., the identity map $\mathfrak{L}\rightarrow 
\mathfrak{L}$. As a consequence, for any permutation invariant $\mathfrak{m}%
\in \mathcal{M}_{1}$, the thermodynamic limit 
\begin{equation*}
\mathrm{P}_{\mathfrak{m}}^{\sharp }:=\underset{l\rightarrow \infty }{\lim }%
\left\{ p_{l}\right\}
\end{equation*}%
of the pressure $p_{l}=p_{l,\mathfrak{m}}$ (\ref{BCS pressure}) associated
with the internal energy $U_{l}$ can be computed via the minimization of the
affine free--energy functional $f_{\mathfrak{m}}^{\sharp }$ on the subset $%
E_{\Pi }\subseteq E_{1}$ of permutation invariant states, see Definitions %
\ref{Free-energy density long range}, \ref{Pressure} and Lemma \ref{lemma
property free--energy density functional} (i).

\begin{theorem}[Thermodynamics as a variational problem on $E_{\Pi }$]
\label{theorem pressure perm inv}\mbox{ }\newline
For any permutation invariant $\mathfrak{m}\in \mathcal{M}_{1}$,%
\index{Pressure!variational problems!permutation invariant}%
\begin{equation*}
\mathrm{P}_{\mathfrak{m}}^{\sharp }=-\inf\limits_{\rho \in E_{\Pi }}\,f_{%
\mathfrak{m}}^{\sharp }(\rho )=-\inf\limits_{\rho \in E_{\otimes }}\,f_{%
\mathfrak{m}}^{\sharp }\left( \rho \right) .
\end{equation*}%
Here, the restriction of $f_{\mathfrak{m}}^{\sharp }$ on the weak$^{\ast }$%
--compact convex set $E_{\Pi }$ equals, for any $x\in \mathbb{Z}%
^{d}\backslash \{0\}$, the weak$^{\ast }$--lower semi--continuous affine map%
\index{Free--energy density functional!long--range!permutation invariant}%
\begin{equation}
\rho \mapsto \int_{\mathcal{A}}\gamma _{a}\rho \left( (\mathfrak{e}_{\Phi
_{a}}-i\mathfrak{e}_{\Phi _{a}^{\prime }})\alpha _{x}(\mathfrak{e}_{\Phi
_{a}}+i\mathfrak{e}_{\Phi _{a}^{\prime }})\right) \mathrm{d}\mathfrak{a}%
\left( a\right) +e_{\Phi }(\rho )-\beta ^{-1}s(\rho )
\label{free-energy functional perm inv}
\end{equation}%
from $E_{\Pi }$ to $\mathbb{R}$, see (\ref{eq:enpersite}) for the definition
of $\mathfrak{e}_{\Phi }$.
\end{theorem}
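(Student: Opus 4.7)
My plan is to establish the theorem in three stages: first, the simplified form \eqref{free-energy functional perm inv} of $f_{\mathfrak{m}}^{\sharp}$ on $E_\Pi$; second, the variational identity $\mathrm{P}_{\mathfrak{m}}^{\sharp} = -\inf_{E_\Pi} f_{\mathfrak{m}}^{\sharp}$; and third, the coincidence of the infima over $E_\Pi$ and $E_\otimes$. For the first stage, permutation invariance of $\Phi$, $\Phi_a$, $\Phi_a^\prime$ forces the energy observables $\mathfrak{e}_{\Phi_a}+i\mathfrak{e}_{\Phi_a^\prime}$ to reduce to single-site operators in $\mathcal{U}_{\{0\}}$. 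Lemma \ref{space--averaging functional perm inv} then allows me to replace $\Delta_{a,\pm}(\rho)$ by $\gamma_{a,\pm}\rho((\mathfrak{e}_{\Phi_a}-i\mathfrak{e}_{\Phi_a^\prime})\alpha_x(\mathfrak{e}_{\Phi_a}+i\mathfrak{e}_{\Phi_a^\prime}))$ for any fixed $x\ne 0$; Lebesgue's dominated convergence (justified by $\|\mathfrak{e}_{\Phi_a}\|\le\|\Phi_a\|_{\mathcal{W}_1}$ and $\mathfrak{m}\in\mathcal{M}_1$) gives \eqref{free-energy functional perm inv}. Weak$^*$-lower semi-continuity and affinity then follow from Lemmata \ref{Th.en.func}(i) and \ref{lemma property entropy}(i), the quadratic integral term being itself weak$^*$-continuous and affine.

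For the lower bound $\mathrm{P}_{\mathfrak{m}}^{\sharp}\ge -\inf_{E_\Pi} f_{\mathfrak{m}}^{\sharp}$, I would fix $\omega\in E_\Pi$ and invoke the Gibbs variational principle (Theorem \ref{passivity.Gibbs}) to get $p_l\ge -|\Lambda_l|^{-1}\omega(U_l)+(\beta|\Lambda_l|)^{-1}S(\omega|_{\Lambda_l})$. Since $\omega$ is permutation invariant, $|\Lambda_l|^{-1}\omega(U_{\Lambda_l}^\Phi)\to e_\Phi(\omega)$ by Lemma \ref{lemma energy density exists}; an elementary double-sum computation using permutation invariance yields
\[
|\Lambda_l|^{-2}\omega((U^{\Phi_a}+iU^{\Phi_a^\prime})^*(U^{\Phi_a}+iU^{\Phi_a^\prime}))=\omega((\mathfrak{e}_{\Phi_a}-i\mathfrak{e}_{\Phi_a^\prime})\alpha_{z_0}(\mathfrak{e}_{\Phi_a}+i\mathfrak{e}_{\Phi_a^\prime}))+O(|\Lambda_l|^{-1})
\]
for any $z_0\ne 0$, Lebesgue's dominated convergence handling the integral over $\mathcal{A}$; finally $|\Lambda_l|^{-1}S(\omega|_{\Lambda_l})\to s(\omega)$ by Lemma \ref{lemma property entropybis}. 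Taking $l\to\infty$ yields $\mathrm{P}_{\mathfrak{m}}^{\sharp}\ge -f_{\mathfrak{m}}^{\sharp}(\omega)$ for every $\omega\in E_\Pi$.

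For the reverse bound $\mathrm{P}_{\mathfrak{m}}^{\sharp}\le -\inf_{E_\Pi} f_{\mathfrak{m}}^{\sharp}$, I would symmetrize the Gibbs state. Since $U_l$ is permutation invariant on $\Lambda_l$, so is $\rho_l$. Define $\omega_l\in E_\Pi$ on local $A$ with support of size at most $|\Lambda_l|$ by $\omega_l(A):=\rho_l(\alpha_\pi(A))$ for any $\pi\in\Pi$ injecting the support into $\Lambda_l$; consistency follows from permutation invariance of $\rho_l$, and extension to all of $\mathcal{U}$ is obtained by Hahn--Banach/compactness, preserving positivity and normalization. By construction $\omega_l|_{\Lambda_L}=\rho_l|_{\Lambda_L}$ for $L\le l$. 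Extract a subsequence along which $\omega_l\to\omega^*\in E_\Pi$ weakly and $|\Lambda_l|^{-1}S(\rho_l)\to\sigma$. The averaging identity above shows $-p_l$ equals \eqref{free-energy functional perm inv} at $\omega_l$ minus the entropy defect $\beta^{-1}[|\Lambda_l|^{-1}S(\rho_l)-s(\omega_l)]$ plus $O(|\Lambda_l|^{-1})$. Strong sub-additivity of the von Neumann entropy combined with permutation invariance of $\rho_l$ gives the monotonicity $|\Lambda_L|^{-1}S(\rho_{l,\Lambda_L})\ge |\Lambda_l|^{-1}S(\rho_l)$; letting $l\to\infty$ first (so $\rho_{l,\Lambda_L}\to\omega^*_{\Lambda_L}$, with continuity of $S$ on the finite-dimensional algebra $\mathcal{U}_{\Lambda_L}$) and then $L\to\infty$ gives $s(\omega^*)\ge\sigma$. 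Consequently $f_{\mathfrak{m}}^{\sharp}(\omega^*)\le -\mathrm{P}_{\mathfrak{m}}^{\sharp}$, which closes the identity.

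Equality of the two infima then follows from the Bauer maximum principle (Lemma \ref{Bauer maximum principle bis}) applied to the upper semi-continuous concave function $-f_{\mathfrak{m}}^{\sharp}$ on the Bauer simplex $E_\Pi$ with $\mathcal{E}_\Pi=E_\otimes$ (Theorem \ref{Thm Poulsen simplex copy(1)}), so $\sup_{E_\Pi}(-f_{\mathfrak{m}}^{\sharp})=\sup_{E_\otimes}(-f_{\mathfrak{m}}^{\sharp})$. The main obstacle I expect is the upper bound on the pressure: constructing the permutation-invariant extension $\omega_l$ on all of $\mathcal{U}$ from the naive definition on operators with support of size at most $|\Lambda_l|$, and coupling the strong sub-additivity estimate for $S$ with weak$^*$-compactness so that one can pass simultaneously to the limit in the entropy and in the long-range quadratic integral.
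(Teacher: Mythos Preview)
Your overall strategy---establishing the explicit form \eqref{free-energy functional perm inv}, then a lower bound via passivity of Gibbs states, then an upper bound via the Gibbs state, then Bauer---is exactly the paper's. The difference lies in how you handle the upper bound, and here the obstacle you yourself flag is real: the Hahn--Banach/compactness extension you propose produces \emph{some} state on $\mathcal{U}$ restricting correctly to each $\mathcal{U}_{\Lambda'}$ with $|\Lambda'|\le|\Lambda_l|$, but there is no reason this extension lands in $E_\Pi$, since the partial data do not determine a permutation-invariant functional on the larger local algebras. Your subsequent entropy argument via strong sub-additivity is fine, but it presupposes that $\omega_l\in E_\Pi$ (or at least that the weak$^\ast$-limit is).

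The paper sidesteps this entirely. It extends the finite-volume Gibbs state $\rho_l$ \emph{periodically} (possible because $\rho_l$ is even) and forms the space average $\hat\rho_l\in E_1$. Two things then come for free: first, since $\rho_l$ is invariant under all permutations fixing $\Lambda_l^c$, every weak$^\ast$-accumulation point of $\{\rho_l\}$---and hence of $\{\hat\rho_l\}$, which has the same accumulation points---automatically lies in $E_\Pi$, with no extension argument needed; second, the periodic extension is a \emph{product} over translates of $\Lambda_l$, so additivity of the von Neumann entropy on products gives the exact identity $s(\hat\rho_l)=|\Lambda_l|^{-1}S(\rho_l)$, after which weak$^\ast$-upper semi-continuity of $s$ finishes the bound. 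This replaces both your Hahn--Banach step and your sub-additivity monotonicity estimate by a one-line computation. (The paper also reduces first to discrete finite-range $\mathfrak{m}\in\mathcal{M}_1^{\mathrm{df}}$ and extends by density, which cleans up the dominated-convergence bookkeeping you do directly.)
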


\begin{proof}
Observe first that the equality between $f_{\mathfrak{m}}^{\sharp }$ and the
weak$^{\ast }$--lower semi--continuous affine map (\ref{free-energy
functional perm inv}) (cf. Lemmata \ref{lemma property entropy} (i) and \ref%
{Th.en.func} (i)) is a direct consequence of Lemma \ref{space--averaging
functional perm inv} because $\mathfrak{m}$ is permutation invariant. By the
Bauer maximum principle (Lemma \ref{Bauer maximum principle}), it follows
that the minimization of $f_{\mathfrak{m}}^{\sharp }$ on the weak$^{\ast }$%
--compact convex set $E_{\Pi }$ can be restricted to the subset $\mathcal{E}%
_{\Pi }$ of extreme points which by Theorem \ref{Stoermer CAR} equals the
set $E_{\otimes }$ of product states.

We analyze now the thermodynamic limit $l\rightarrow \infty $ of the
pressure $p_{l}=p_{l,\mathfrak{m}}$. We concentrate our study on discrete
and finite range permutation invariant models 
\begin{equation*}
\mathfrak{m}=\{\Phi \}\cup \{\Phi _{k},\Phi _{k}^{\prime }\}_{k=1}^{N}\in 
\mathcal{M}_{1}^{\mathrm{df}}\subseteq \mathcal{M}_{1}^{\mathrm{d}}\subseteq 
\mathcal{M}_{1}
\end{equation*}%
only. The extension of this proof to any permutation invariant models $%
\mathfrak{m}\in \mathcal{M}_{1}$ is performed by using the density of the
set of discrete permutation invariant models in the set of permutation
invariant models, see similar arguments performed in Section \ref{definition
models} as well as in Section \ref{section reduction of the problem}.

The lower bound on the pressure $p_{l}=p_{l,\mathfrak{m}}$ for discrete
models $\mathfrak{m}\in \mathcal{M}_{1}^{\mathrm{df}}$ follows from the
passivity of Gibbs states (Theorem \ref{passivity.Gibbs}). Indeed, note that 
$\mathfrak{e}_{\Phi }\in \mathcal{U}_{\{0\}}$ for any permutation invariant
interaction $\Phi \in \mathcal{W}_{1}$. Therefore, as $\mathfrak{m}$ is
permutation invariant, straightforward estimates show, for all $\rho \in
E_{\Pi }$ and any $x\in \mathbb{Z}^{d}\backslash \{0\}$, that 
\begin{equation}
\underset{l\rightarrow \infty }{\lim }\left\{ 
\frac{1}{|\Lambda _{l}|^{2}}\rho \left( (U_{\Lambda _{l}}^{\Phi
_{k}}+iU_{\Lambda _{l}}^{\Phi _{k}^{\prime }})^{\ast }(U_{\Lambda
_{l}}^{\Phi _{k}}+iU_{\Lambda _{l}}^{\Phi _{k}^{\prime }})\right) \right\}
=\rho (\mathfrak{e}_{\Phi _{k}}^{\ast }\alpha _{x}(\mathfrak{e}_{\Phi
_{k}})).  \label{eq sup mean enerfybisbis}
\end{equation}%
Therefore, from (\ref{BCS equation 5}) and (\ref{eq sup mean enerfybisbis})
combined with Definitions \ref{entropy.density} and \ref{definition energy
density}, we deduce that 
\begin{equation}
\liminf\limits_{l\rightarrow \infty }p_{l}\geq -\inf\limits_{\rho \in E_{\Pi
}}\,f_{\mathfrak{m}}^{\sharp }(\rho ).  \label{perm inv lower bound}
\end{equation}%
So, we concentrate now our analysis on the upper bound.

Let $\rho _{l}\in E_{\Lambda _{l}}$ be the Gibbs equilibrium state (\ref%
{Gibbs.state}) w.r.t. the internal energy $U_{l}\in \mathcal{U}_{\Lambda
_{l}}$. We define as usual a space--averaged t.i. Gibbs state $\hat{\rho}%
_{l}\in E_{1}$ by using (\ref{t.i. state rho l}) with the even state $\rho
_{l}$ seen as a periodic state on the whole $C^{\ast }$--algebra $\mathcal{U}
$. Observe that the sequences $\{\rho _{l}\}_{l\in \mathbb{N}}$ and $\{\hat{%
\rho}_{l}\}_{l\in \mathbb{N}}$ have the same weak$^{\ast }$--accumulation
points. Since $\mathfrak{m}$ is permutation invariant, the internal energy $%
U_{l}$ is invariant w.r.t. permutations of lattice sites inside the boxes $%
\Lambda _{l}$ which in turn implies the invariance of the state $\rho
_{l}\in E$ under permutations $\pi \in \Pi $ such that $\pi |_{\mathfrak{L}%
\backslash \Lambda _{l}}=\mathrm{id}|_{\mathfrak{L}\backslash \Lambda _{l}}$%
. This invariance property of $\rho _{l}$ yields that the weak$^{\ast }$%
--accumulation points of sequences $\{\rho _{l}\}_{l\in \mathbb{N}}$ and $\{%
\hat{\rho}_{l}\}_{l\in \mathbb{N}}$ belong to $E_{\Pi }$. As a consequence,
there is $\rho _{\infty }\in E_{\Pi }$ and a diverging subsequence $%
\{l_{n}\}_{n\in \mathbb{N}}$ such that both $\rho _{l_{n}}$ and $\hat{\rho}%
_{l_{n}}$ converge in the weak$^{\ast }$--topology to the permutation
invariant state $\rho _{\infty }$.

As $\mathfrak{e}_{\Phi }\in \mathcal{U}_{\{0\}}$ for any permutation
invariant model $\mathfrak{m}\in \mathcal{M}_{1}$, observe, by Lemma \ref%
{Th.en.func} (i), that 
\begin{equation}
\lim\limits_{n\rightarrow \infty }\frac{1}{|\Lambda _{l_{n}}|}\rho
_{l_{n}}(U_{l_{n}}^{\Phi })=\lim\limits_{n\rightarrow \infty }\rho _{l_{n}}(%
\widehat{\mathfrak{e}}_{\Phi ,l_{n}})=\lim\limits_{n\rightarrow \infty
}e_{\Phi }(\rho _{l_{n}})=e_{\Phi }(\rho _{\infty }),  \label{perm inv 1}
\end{equation}%
where 
\begin{equation}
\widehat{\mathfrak{e}}_{\Phi ,L}:=\frac{1}{|\Lambda _{L}|}\sum\limits_{x\in
\Lambda _{L}}\alpha _{x}\left( \mathfrak{e}_{\Phi }\right) =\widehat{%
\mathfrak{e}}_{\Phi ,L}^{\ast }.  \label{perm inv 1bis}
\end{equation}%
By combining the symmetry of the state $\rho _{l}\in E$ under permutations
of lattice sites inside the boxes $\Lambda _{l}$ with elementary
combinatorics, 
\begin{align}
& \lim\limits_{n\rightarrow \infty }\left\{ \frac{1}{|\Lambda _{l_{n}}|^{2}}%
\rho _{l_{n}}\left( (U_{\Lambda _{l_{n}}}^{\Phi _{k}}+iU_{\Lambda
_{l_{n}}}^{\Phi _{k}^{\prime }})^{\ast }(U_{\Lambda _{l_{n}}}^{\Phi
_{k}}+iU_{\Lambda _{l_{n}}}^{\Phi _{k}^{\prime }})\right) \right\}
\label{perm inv 2} \\
& =\lim\limits_{n\rightarrow \infty }\rho _{l_{n}}((\mathfrak{e}_{\Phi
_{k}}-i\mathfrak{e}_{\Phi _{k}^{\prime }})\alpha _{x}(\mathfrak{e}_{\Phi
_{k}}+i\mathfrak{e}_{\Phi _{k}^{\prime }}))=\rho _{\infty }((\mathfrak{e}%
_{\Phi _{k}}-i\mathfrak{e}_{\Phi _{k}^{\prime }})\alpha _{x}(\mathfrak{e}%
_{\Phi _{k}}+i\mathfrak{e}_{\Phi _{k}^{\prime }}))  \notag
\end{align}%
for any $x\in \mathbb{Z}^{d}\backslash \{0\}$. Furthermore, by using Lemma %
\ref{lemma property entropy} (i), the periodicity of $\rho _{l}$ and the
additivity of the von Neumann entropy for product states, 
\begin{equation}
s(\hat{\rho}_{l})=\frac{1}{|\Lambda _{l}|}\sum\limits_{x\in \Lambda
_{l}}s(\rho _{l}\circ \alpha _{x})=s(\rho _{l})=\lim\limits_{n\rightarrow
\infty }\frac{1}{|\Lambda _{l}^{(n)}|}S(\rho _{l}|_{\mathcal{U}_{\Lambda
_{l}^{(n)}}})=\frac{1}{|\Lambda _{l}|}S(\rho _{l})
\label{mean entropy per volume}
\end{equation}%
with the definition 
\begin{equation}
\Lambda _{l}^{(n)}:=\underset{x\in \Lambda _{n}}{\cup }\{\Lambda
_{l}+(2l+1)x\}.  \label{lambda_n_l}
\end{equation}%
Therefore, by using (\ref{BCS equation 4}) combined with (\ref{perm inv 1}),
(\ref{perm inv 2}), (\ref{mean entropy per volume}), and Lemma \ref%
{space--averaging functional perm inv}, 
\begin{equation}
\underset{l\rightarrow \infty }{\lim \sup }\ p_{l}\leq
-\lim\limits_{n\rightarrow \infty }f_{\mathfrak{m}}^{\sharp }(\rho
_{l_{n}})\leq -f_{\mathfrak{m}}^{\sharp }(\rho _{\infty })
\label{perm inv upper bound}
\end{equation}%
because the entropy density functional $s$ is a weak$^{\ast }$--upper
semi--continuous functional on $E_{1}$ (Lemma \ref{lemma property entropy}
(i)).

Since $\rho _{\infty }\in E_{\Pi }$, the theorem follows from (\ref{perm inv
lower bound}) and (\ref{perm inv upper bound}) combined with the density of
the set of discrete permutation invariant models in the set of permutation
invariant models, see, e.g., Corollary \ref{lemma reduction dfbisbis}. 
\end{proof}%

As a consequence, the thermodynamics of any permutation invariant model $%
\mathfrak{m}\in \mathcal{M}_{1}$ can be related to a weak$^{\ast }$%
--continuous free--energy density functional over one--site states:

\begin{corollary}[Variational problem on one--site states]
\label{pression_inv_perm}\mbox{ }\newline
For any permutation invariant $\mathfrak{m}\in \mathcal{M}_{1}$, the
(infinite--volume) pressure equals%
\begin{equation*}
\mathrm{P}_{\mathfrak{m}}^{\sharp }=-\inf_{\rho _{\left\{ 0\right\} }\in
E_{\left\{ 0\right\} }}\left\{ \int_{\mathcal{A}}\gamma _{a}|\rho _{\left\{
0\right\} }(\mathfrak{e}_{\Phi _{a}}+i\mathfrak{e}_{\Phi _{a}^{\prime
}})|^{2}\mathrm{d}\mathfrak{a}(a)+\rho _{\left\{ 0\right\} }(\mathfrak{e}%
_{\Phi })-\beta ^{-1}S(\rho _{\left\{ 0\right\} })\right\}
\end{equation*}%
with the weak$^{\ast }$--continuous functional $S$ being the von Neumann
entropy defined by (\ref{neuman entropy}).%
\index{Entropy density functional!von Neumann}
\end{corollary}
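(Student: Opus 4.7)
The plan is to deduce the corollary directly from Theorem \ref{theorem pressure perm inv} by evaluating the restriction of $f_{\mathfrak{m}}^{\sharp}$ on the weak$^{\ast}$--compact set $E_{\otimes}$ of product states, which by St{\o}rmer theorem (Theorem \ref{Stoermer CAR}) equals $\mathcal{E}_{\Pi}$. Theorem \ref{theorem pressure perm inv} already reduces the variational problem to the infimum of the map \eqref{free-energy functional perm inv} over $E_{\otimes}$, for any chosen $x\in\mathbb{Z}^{d}\backslash\{0\}$. Since $E_{\otimes}$ is in bijection with $E_{\{0\}}$ (in the sense of \cite[Theorem 11.2.]{Araki-Moriya}) via the map $\rho_{\{0\}}\mapsto\hat{\rho}$, the only remaining task is to rewrite each of the three summands of \eqref{free-energy functional perm inv} as an explicit functional of $\rho_{\{0\}}$.

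The first step is to compute the long--range energy term. Because $\mathfrak{m}$ is permutation invariant, both $\mathfrak{e}_{\Phi_a}$ and $\mathfrak{e}_{\Phi_a^{\prime}}$ belong to the (even) one--site algebra $\mathcal{U}^{+}\cap\mathcal{U}_{\{0\}}$, so that $\alpha_{x}(\mathfrak{e}_{\Phi_a}+i\mathfrak{e}_{\Phi_a^{\prime}})\in\mathcal{U}_{\{x\}}$ with $x\neq 0$. The product--state property then yields
\begin{equation*}
\hat{\rho}\bigl((\mathfrak{e}_{\Phi_a}-i\mathfrak{e}_{\Phi_a^{\prime}})\alpha_{x}(\mathfrak{e}_{\Phi_a}+i\mathfrak{e}_{\Phi_a^{\prime}})\bigr)=\rho_{\{0\}}(\mathfrak{e}_{\Phi_a}-i\mathfrak{e}_{\Phi_a^{\prime}})\,\rho_{\{0\}}(\mathfrak{e}_{\Phi_a}+i\mathfrak{e}_{\Phi_a^{\prime}})=|\rho_{\{0\}}(\mathfrak{e}_{\Phi_a}+i\mathfrak{e}_{\Phi_a^{\prime}})|^{2},
\end{equation*}
where we used that $\mathfrak{e}_{\Phi_a}$, $\mathfrak{e}_{\Phi_a^{\prime}}$ are self--adjoint. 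Integrating against $\gamma_{a}\mathrm{d}\mathfrak{a}(a)$, which is licit by Lemma \ref{delta reprentation integral} (or by direct bounds from $\mathfrak{m}\in\mathcal{M}_{1}$), produces the first term of the claimed formula. Similarly, since $\mathfrak{e}_{\Phi}\in\mathcal{U}_{\{0\}}$ by permutation invariance, Lemma \ref{lemma energy density exists} gives $e_{\Phi}(\hat{\rho})=\hat{\rho}(\mathfrak{e}_{\Phi})=\rho_{\{0\}}(\mathfrak{e}_{\Phi})$.

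For the entropy term, the goal is to establish that $s(\hat{\rho})=S(\rho_{\{0\}})$ for every product state. Because $\rho_{\{0\}}$ is even and the product extension is the unique state on $\mathcal{U}$ obeying the factorization over mutually disjoint sites (cf.\ \cite[Theorem 11.2.]{Araki-Moriya}), the local density matrix of $\hat{\rho}$ on $\mathcal{U}_{\Lambda_{L}}$ is the $|\Lambda_{L}|$--fold graded tensor product of the density matrix $\mathrm{d}_{\rho_{\{0\}}}$, which by the trace identity on the CAR algebra yields the additivity $S(\hat{\rho}|_{\mathcal{U}_{\Lambda_{L}}})=|\Lambda_{L}|\,S(\rho_{\{0\}})$. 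Dividing by $|\Lambda_{L}|$ and passing to the limit using Definition \ref{entropy.density} gives $s(\hat{\rho})=S(\rho_{\{0\}})$.

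Inserting these three identities into the expression of Theorem \ref{theorem pressure perm inv} and noting that $E_{\otimes}$ is homeomorphic to the set of even one--site states (which is the only set on which the functional depends, since $\mathfrak{e}_{\Phi},\mathfrak{e}_{\Phi_a},\mathfrak{e}_{\Phi_a^{\prime}}$ are all even, so that any non--even part of $\rho_{\{0\}}$ drops out) yields the asserted variational representation of $\mathrm{P}_{\mathfrak{m}}^{\sharp}$. The main (and only nontrivial) obstacle is the entropy identity $s(\hat{\rho})=S(\rho_{\{0\}})$: the Fermi setting prevents a naive appeal to tensor--product factorization, and one must instead use the graded tensor structure of $\mathcal{U}_{\Lambda_{L}}$ together with the evenness of $\rho_{\{0\}}$ to recover additivity of the von Neumann entropy; all other manipulations are straightforward substitutions.
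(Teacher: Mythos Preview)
Your proof is correct and follows essentially the same route as the paper: reduce via Theorem \ref{theorem pressure perm inv} to product states, then use the factorization property to identify each term with a functional of $\rho_{\{0\}}$, invoking in particular $s(\hat\rho)=S(\rho_{\{0\}})$ for product states. The paper's argument is simply a terser version of yours, citing Lemma \ref{space--averaging functional perm inv} for the long--range term and stating the entropy identity without elaboration.

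One minor correction: your parenthetical justification for passing from the infimum over \emph{even} one--site states (the range of the bijection $E_{\otimes}\to E_{\{0\}}^{\mathrm{even}}$) to the infimum over all of $E_{\{0\}}$ is not quite right. You say the non--even part of $\rho_{\{0\}}$ ``drops out'' because the observables $\mathfrak{e}_{\Phi},\mathfrak{e}_{\Phi_a},\mathfrak{e}_{\Phi_a^{\prime}}$ are even, but the von Neumann entropy $S(\rho_{\{0\}})$ does depend on the full state, not only on its even restriction. The correct argument is: for any $\rho_{\{0\}}\in E_{\{0\}}$, its even part $\rho_{\{0\}}^{e}:=\tfrac12(\rho_{\{0\}}+\rho_{\{0\}}\circ\sigma_{\pi})$ agrees with $\rho_{\{0\}}$ on even observables, while concavity of $S$ (property \textbf{S3}) together with $S(\rho_{\{0\}}\circ\sigma_{\pi})=S(\rho_{\{0\}})$ gives $S(\rho_{\{0\}}^{e})\geq S(\rho_{\{0\}})$. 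Hence the free--energy functional is not increased by passing to the even part, and the two infima coincide. The paper glosses over this step entirely, so your instinct to address it was good.
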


\begin{proof}
By Lemma \ref{space--averaging functional perm inv}, for any permutation
invariant model $\mathfrak{m}\in \mathcal{M}_{1}$, $x\in \mathbb{Z}%
^{d}\backslash \{0\}$ and all product states $\rho \in E_{\otimes }$, 
\begin{equation*}
\Delta _{\mathfrak{e}_{\Phi _{a}}+i\mathfrak{e}_{\Phi _{a}^{\prime }}}\left(
\rho \right) =\rho \left( (\mathfrak{e}_{\Phi _{a}}-i\mathfrak{e}_{\Phi
_{a}^{\prime }})\alpha _{x}(\mathfrak{e}_{\Phi _{a}}+i\mathfrak{e}_{\Phi
_{a}^{\prime }})\right) =|\rho _{\left\{ 0\right\} }(\mathfrak{e}_{\Phi
_{a}}+i\mathfrak{e}_{\Phi _{a}^{\prime }})|^{2}
\end{equation*}%
with the state $\rho _{\left\{ 0\right\} }\in E_{\left\{ 0\right\} }$ being
the restriction of $\rho \in E_{\otimes }$ on the local sub--algebra $%
\mathcal{U}_{\left\{ 0\right\} }$. Furthermore, observe that, for any
product state $\rho \in E_{\otimes }$, $s(\rho )=S(\rho _{\left\{ 0\right\}
})$. Therefore, Corollary \ref{pression_inv_perm} is a direct consequence of
Theorem \ref{theorem pressure perm inv}.%
\end{proof}%

The map (\ref{free-energy functional perm inv}) is a weak$^{\ast }$--lower
semi--continuous affine map from $E_{\Pi }$ to $\mathbb{R}$. So, from
Theorem \ref{theorem pressure perm inv}, all generalized permutation
invariant equilibrium states are (usual) equilibrium states as 
\begin{equation*}
\mathit{\Omega }_{\mathfrak{m}}^{\sharp }\cap E_{\Pi }=\mathit{M}_{\mathfrak{%
m}}^{\sharp }\cap E_{\Pi }\neq \emptyset .
\end{equation*}%
Moreover, $\mathit{\Omega }_{\mathfrak{m}}^{\sharp }\cap E_{\Pi }$\ is a 
\emph{face} of $E_{\Pi }$ (cf. Definitions \ref{definition equilibirum state
copy(1)} and \ref{definition equilibirum state}). Since $E_{\Pi }$ is a
Bauer simplex (Theorem \ref{Thm Poulsen simplex copy(1)}) with its set $%
\mathcal{E}_{\Pi }$ of extreme points being the set $E_{\otimes }$ of
product states (Theorem \ref{Stoermer CAR}), $\mathit{M}_{\mathfrak{m}%
}^{\sharp }\cap E_{\Pi }$ is also a simplex%
\index{Simplex} and, by using the Choquet theorem%
\index{Choquet theorem} (cf. Theorems \ref{theorem choquet bis} and \ref%
{theorem choquet bis copy(1)}), each permutation invariant equilibrium state 
$\omega \in \mathit{M}_{\mathfrak{m}}^{\sharp }\cap E_{\Pi }$ has a \emph{%
unique} decomposition in terms of states of the set 
\begin{equation*}
\mathcal{E}(\mathit{M}_{\mathfrak{m}}^{\sharp }\cap E_{\Pi })=\mathcal{E}(%
\mathit{M}_{\mathfrak{m}}^{\sharp }\cap E_{\Pi })\cap E_{\otimes }
\end{equation*}%
of extreme states of $\mathit{M}_{\mathfrak{m}}^{\sharp }\cap E_{\Pi }$. In
fact, Theorem \ref{theorem pressure perm inv} and Corollary \ref%
{pression_inv_perm} make a detailed analysis of the set $\mathit{M}_{%
\mathfrak{m}}^{\sharp }\cap E_{\Pi }$ of permutation invariant equilibrium
states possible. As an example we recommend \cite{BruPedra1}, where a
complete description of permutation invariant equilibrium states for a class
of physically relevant models is performed.

Note that $\mathit{\Omega }_{\mathfrak{m}}^{\sharp }\backslash E_{\Pi }$ may
not be empty, i.e., the existence of a generalized t.i. equilibrium state
which is not permutation invariant, is, a priori, not excluded. However, for
permutation invariant models $\mathfrak{m}$, this set $\mathit{\Omega }_{%
\mathfrak{m}}^{\sharp }\backslash E_{\Pi }$ is not relevant as soon as the
weak$^{\ast }$--limit of Gibbs states is concerned:

\begin{corollary}[Weak$^{\ast }$--limit of Gibbs equilibrium states]
\label{limit Gibbs states}\mbox{ }\newline
\index{States!Gibbs}For any permutation invariant $\mathfrak{m}\in \mathcal{M%
}_{1}$, the weak$^{\ast }$--accumulation points of Gibbs equilibrium states $%
\{\rho _{l}\}_{l\in \mathbb{N}}$ belong to the set $\mathit{M}_{\mathfrak{m}%
}^{\sharp }\cap E_{\Pi }$ of permutation invariant equilibrium states.
\end{corollary}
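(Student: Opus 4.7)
The plan is to extract the corollary essentially from within the proof of Theorem \ref{theorem pressure perm inv}, which already exhibits one such accumulation point. First I would fix a diverging subsequence $\{l_n\}_{n\in\mathbb{N}}$ along which $\rho_{l_n}$ converges weak${}^{\ast}$ to some $\omega\in E$ (existence by weak${}^{\ast}$-compactness plus metrizability of the unit ball of $\mathcal{U}^{\ast}$ via separability of $\mathcal{U}$). I will extend $\rho_l$ to $\mathcal{U}$ as an even state by periodic repetition so the convergence makes sense on the whole algebra.

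The first task is to show $\omega\in E_{\Pi}$. Because $\mathfrak{m}$ is permutation invariant, $U_l\in\mathcal{U}_{\Lambda_l}$ is invariant under every $\pi\in\Pi$ with $\pi|_{\mathfrak{L}\setminus\Lambda_l}=\mathrm{id}$, so by uniqueness of the Gibbs state (Theorem \ref{passivity.Gibbs}) the restriction $\rho_l|_{\mathcal{U}_{\Lambda_l}}$ is invariant under all such $\pi$. For a fixed $\pi\in\Pi$ and any $A\in\mathcal{U}_0$, the permutation $\pi$ fixes $\mathfrak{L}\setminus\Lambda_l$ for all sufficiently large $l$ and the local support of $A$ sits inside $\Lambda_l$ eventually, so $\rho_l\circ\alpha_{\pi}(A)=\rho_l(A)$ for $l$ large. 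Passing to the limit gives $\omega\circ\alpha_{\pi}=\omega$ on $\mathcal{U}_0$ and hence on $\mathcal{U}$, i.e., $\omega\in E_{\Pi}$.

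The second task is to prove that $\omega$ is a minimizer of $f_{\mathfrak{m}}^{\sharp}$ on $E_1$. I would first treat the discrete finite range case $\mathfrak{m}\in\mathcal{M}_1^{\mathrm{df}}$, where the identity (\ref{BCS equation 4}) applies. The energy--per--volume term $|\Lambda_{l_n}|^{-1}\rho_{l_n}(U_{l_n}^{\Phi})\to e_{\Phi}(\omega)$ and the long-range term $|\Lambda_{l_n}|^{-2}\rho_{l_n}((U_{\Lambda_{l_n}}^{\Phi_k}+\mathrm{i}U_{\Lambda_{l_n}}^{\Phi_k'})^{\ast}(\cdot))\to\omega((\mathfrak{e}_{\Phi_k}-\mathrm{i}\mathfrak{e}_{\Phi_k'})\alpha_x(\mathfrak{e}_{\Phi_k}+\mathrm{i}\mathfrak{e}_{\Phi_k'}))$ for any $x\neq 0$ are handled exactly by (\ref{perm inv 1}) and (\ref{perm inv 2}), using the partial permutation invariance of $\rho_l$. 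For the entropy, I rely on Lemma \ref{lemma property entropy} (i) (weak${}^{\ast}$-upper semi-continuity of $s$) together with $|\Lambda_l|^{-1}S(\rho_l)=s(\hat{\rho}_l)$ from (\ref{mean entropy per volume}) and the fact that $\hat{\rho}_l\to\omega$ weak${}^{\ast}$. Combining these with Lemma \ref{space--averaging functional perm inv} to rewrite $f_{\mathfrak{m}}^{\sharp}$ on $E_{\Pi}$ in the form (\ref{free-energy functional perm inv}) yields $\limsup_{n\to\infty}p_{l_n}\leq -f_{\mathfrak{m}}^{\sharp}(\omega)$. Since $\lim p_l=\mathrm{P}_{\mathfrak{m}}^{\sharp}=-\inf_{\rho\in E_{\Pi}}f_{\mathfrak{m}}^{\sharp}(\rho)$ by Theorem \ref{theorem pressure perm inv}, we conclude $f_{\mathfrak{m}}^{\sharp}(\omega)=\inf_{E_1}f_{\mathfrak{m}}^{\sharp}$, i.e., $\omega\in\mathit{M}_{\mathfrak{m}}^{\sharp}\cap E_{\Pi}$.

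The only remaining point is to extend from $\mathcal{M}_1^{\mathrm{df}}$ to arbitrary permutation invariant $\mathfrak{m}\in\mathcal{M}_1$; this is where the main (mild) technical obstacle sits. I would approximate $\mathfrak{m}$ in $\|\cdot\|_{\mathcal{M}_1}$ by a sequence $\mathfrak{m}^{(j)}$ of discrete finite range permutation invariant models (using density of such models, as noted after Notation \ref{Notation8} and in Section \ref{definition models}). Using the Lipschitz continuity of $p_l$ and of $\mathrm{P}_{\mathfrak{m}}^{\sharp}$ in $\mathfrak{m}$ (Theorem \ref{BCS main theorem 1} (ii), together with (\ref{petite inequality}) and $\|U_l-U_l^{(j)}\|/|\Lambda_l|$ controlled uniformly in $l$), plus continuity of $f_{\mathfrak{m}}^{\sharp}$ in $\mathfrak{m}$ at fixed $\rho\in E_{\Pi}$ coming from Lemma \ref{Th.en.func} (ii) and the analogous estimate for the long-range energy densities rewritten via Lemma \ref{space--averaging functional perm inv}, a standard $3\varepsilon$-argument propagates the minimization property from $\mathfrak{m}^{(j)}$ to $\mathfrak{m}$. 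This diagonal/approximation step is the only nontrivial piece, and it is already implicit in the proof of Theorem \ref{theorem pressure perm inv}, so it should proceed without surprise.
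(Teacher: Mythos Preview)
Your proposal is correct and follows essentially the same route as the paper: you reproduce the upper-bound argument from the proof of Theorem \ref{theorem pressure perm inv} (equations (\ref{perm inv 1}), (\ref{perm inv 2}), (\ref{mean entropy per volume}) leading to (\ref{perm inv upper bound})) along an arbitrary convergent subsequence, combine it with $\mathrm{P}_{\mathfrak{m}}^{\sharp}=-\inf_{E_{\Pi}}f_{\mathfrak{m}}^{\sharp}$, and extend from $\mathcal{M}_{1}^{\mathrm{df}}$ to general permutation invariant models by density. The paper's own proof simply refers back to that same equation (\ref{perm inv upper bound}) and notes that $\rho_l$ and $\hat{\rho}_l$ share the same weak$^{\ast}$--accumulation points in $E_{\Pi}$; you use the latter fact implicitly when invoking $\hat{\rho}_l\to\omega$ for the entropy estimate, which is justified by the partial permutation invariance of $\rho_l$ that you already established.
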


\begin{proof}
As explained in the proof of Theorem \ref{theorem pressure perm inv}, the
state $\rho _{l}\in E_{\Lambda _{l}}$ (\ref{Gibbs.state}) associated with $%
U_{\Lambda _{l}}$ allows us to define a space--averaged t.i. Gibbs state $%
\hat{\rho}_{l}\in E_{1}$. The sequences $\{\rho _{l}\}_{l\in \mathbb{N}}$
and $\{\hat{\rho}_{l}\}_{l\in \mathbb{N}}$ have the same weak$^{\ast }$%
--accumulation points which all belong to $E_{\Pi }$ because $\rho _{l}$ is
invariant under permutations $\pi \in \Pi $ such that $\pi |_{\mathfrak{L}%
\backslash \Lambda _{l}}=\mathrm{id}|_{\mathfrak{L}\backslash \Lambda _{l}}$%
. Therefore, the corollary is a direct consequence of Theorem \ref{theorem
pressure perm inv} combined with Equation (\ref{perm inv upper bound})
extended to any permutation invariant model $\mathfrak{m}\in \mathcal{M}_{1}$
(instead of discrete models only). 
\end{proof}%

\chapter{Analysis of the Pressure via t.i. States\label{section proof of
theorem main}}

\setcounter{equation}{0}%
The aim of this chapter is to prove Theorem \ref{BCS main theorem 1}. This
proof is broken in several lemmata. We first show in Section \ref{section
reduction of the problem} that one can reduce the computation of the
thermodynamic limit of (\ref{BCS pressure}), for any $\mathfrak{m}\in 
\mathcal{M}_{1}$, to discrete finite range models 
\begin{equation*}
\{\Phi \}\cup \{\Phi _{k},\Phi _{k}^{\prime }\}_{k=1}^{N}\in \mathcal{M}%
_{1}^{\mathrm{df}}:=\mathcal{M}_{1}^{\mathrm{d}}\cap \mathcal{M}_{1}^{%
\mathrm{f}}\subseteq \mathcal{M}_{1},
\end{equation*}%
see Corollary \ref{lemma reduction dfbisbis}. Then in Section \ref{Section
passivity of gibbs states} we use the so--called passivity of Gibbs states
(Theorem \ref{passivity.Gibbs}) to find the thermodynamic limit of (\ref{BCS
pressure}), for any $\mathfrak{m}\in \mathcal{M}_{1}^{\mathrm{df}}$, from
which we deduce Theorem \ref{BCS main theorem 1}, see Theorem \ref{BCS main
theorem 1 copy(1)}.

\section{Reduction to discrete finite range models\label{section reduction
of the problem}}

From the density of the set of finite range interactions in $\mathcal{W}_{1}$%
, recall that the sub--space $\mathcal{M}_{1}^{\mathrm{df}}:=\mathcal{M}%
_{1}^{\mathrm{d}}\cap \mathcal{M}_{1}^{\mathrm{f}}$ of discrete finite range
models is dense in $\mathcal{M}_{1}$. As a consequence, the thermodynamic
limit 
\begin{equation*}
\lim_{l\rightarrow \infty }p_{l,\mathfrak{m}}=\lim_{l\rightarrow \infty
}\left\{ \frac{1}{\beta |\Lambda _{l}|}\ln \mathrm{Trace}_{\wedge \mathcal{H}%
_{\Lambda _{l}}}(\mathrm{e}^{-\beta U_{l}})\right\}
\end{equation*}%
of (\ref{BCS pressure}), for any $\mathfrak{m}\in \mathcal{M}_{1}$, can be
found by using a sequence $\{\mathfrak{m}_{n}\}_{n\in \mathbb{N}}\subseteq 
\mathcal{M}_{1}^{\mathrm{df}}$ of discrete finite range models converging to 
$\mathfrak{m}$. This result follows from the next two lemmata:

\begin{lemma}[Equicontinuity of the map $\mathfrak{m}\mapsto p_{\ell ,%
\mathfrak{m}}$]
\label{lemma reduction df}\mbox{ }\newline
The family of maps $\mathfrak{m}\mapsto p_{l,\mathfrak{m}}$ is equicontinuous%
\footnote{%
For each sequence $\{\mathfrak{m}_{n}\}_{n\in \mathbb{N}}\subset \mathcal{M}%
_{1}$ converging to $\mathfrak{m}$, $p_{\ell ,\mathfrak{m}_{n}}$ converges
uniformly in $\ell \in \mathbb{N}$ to $p_{\ell ,\mathfrak{m}}.$} for $l\in 
\mathbb{N}$. Then, $\mathfrak{m}\mapsto \mathrm{P}_{\mathfrak{m}}^{\sharp }$
(Definition \ref{Pressure}) is a locally Lipschitz continuous map from $%
\mathcal{M}_{1}$ to $\mathbb{R}$.
\end{lemma}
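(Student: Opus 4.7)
The plan is to establish a finite-volume uniform Lipschitz estimate
\[
|p_{l,\mathfrak{m}_1}-p_{l,\mathfrak{m}_2}|\leq L(\mathfrak{m}_1,\mathfrak{m}_2)\,\|\mathfrak{m}_1-\mathfrak{m}_2\|_{\mathcal{M}_1},
\]
valid for every $l\in\mathbb{N}$, with a constant $L(\mathfrak{m}_1,\mathfrak{m}_2)$ depending only on $\|\mathfrak{m}_1\|_{\mathcal{M}_1}$ and $\|\mathfrak{m}_2\|_{\mathcal{M}_1}$ and not on $l$. The starting point is the standard trace convexity inequality (\ref{petite inequality}) applied to $A=-\beta U_l(\mathfrak{m}_1)$ and $B=-\beta U_l(\mathfrak{m}_2)$, which reduces the problem to bounding $|\Lambda_l|^{-1}\|U_l(\mathfrak{m}_1)-U_l(\mathfrak{m}_2)\|$ uniformly in $l$.

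Split $U_l$ according to Definition \ref{definition BCS-type model} into its local and long--range parts. For the local part, since $U_{\Lambda_l}^{\Phi}=\sum_{\Lambda\subseteq\Lambda_l}\Phi_\Lambda$, a rearrangement by the site $x\in\Lambda_l$ belonging to $\Lambda$ gives the uniform bound $\|U_{\Lambda_l}^{\Phi_1}-U_{\Lambda_l}^{\Phi_2}\|\leq|\Lambda_l|\,\|\Phi_1-\Phi_2\|_{\mathcal{W}_1}$, which after dividing by $|\Lambda_l|$ yields $\|\Phi_1-\Phi_2\|_{\mathcal{W}_1}$. For the long--range part, with $A_{k,a}:=U_{\Lambda_l}^{\Phi_{k,a}}+iU_{\Lambda_l}^{\Phi_{k,a}'}$, the elementary identity $A_1^*A_1-A_2^*A_2=A_1^*(A_1-A_2)+(A_1^*-A_2^*)A_2$ together with $\|A_{k,a}\|\leq|\Lambda_l|(\|\Phi_{k,a}\|_{\mathcal{W}_1}+\|\Phi_{k,a}'\|_{\mathcal{W}_1})$ gives
\[
\bigl\|A_{1,a}^{\ast}A_{1,a}-A_{2,a}^{\ast}A_{2,a}\bigr\|\leq |\Lambda_l|^2\bigl(\Vert\Phi_{1,a}\Vert_{\mathcal{W}_1}+\Vert\Phi_{2,a}\Vert_{\mathcal{W}_1}+\Vert\Phi_{1,a}'\Vert_{\mathcal{W}_1}+\Vert\Phi_{2,a}'\Vert_{\mathcal{W}_1}\bigr)\bigl(\Vert\Phi_{1,a}-\Phi_{2,a}\Vert_{\mathcal{W}_1}+\Vert\Phi_{1,a}'-\Phi_{2,a}'\Vert_{\mathcal{W}_1}\bigr).
\]
Since $U_l$ contains this term weighted by $1/|\Lambda_l|$, an additional division by $|\Lambda_l|$ cancels the two factors $|\Lambda_l|^2$, so that integrating in $a$ against $\mathfrak{a}$ and applying the Cauchy--Schwarz inequality in $\mathcal{L}^2(\mathcal{A},\mathcal{W}_1)$ bounds the long--range contribution by a constant multiple of $(\Vert\mathfrak{m}_1\Vert_{\mathcal{M}_1}+\Vert\mathfrak{m}_2\Vert_{\mathcal{M}_1})\Vert\mathfrak{m}_1-\mathfrak{m}_2\Vert_{\mathcal{M}_1}$. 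Combining both contributions yields the desired $l$--independent bound, hence the equicontinuity assertion.

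For the second assertion, having the thermodynamic limit at each discrete finite range model $\mathfrak{m}\in\mathcal{M}_1^{\mathrm{df}}$ (which will be established in the subsequent lemmata via the passivity of Gibbs states, see Section \ref{Section passivity of gibbs states}) together with the density of $\mathcal{M}_1^{\mathrm{df}}$ in $\mathcal{M}_1$ (cf. Section \ref{definition models}), the $l$--uniform Lipschitz bound just obtained allows us to extend the convergence $p_{l,\mathfrak{m}}\to\mathrm{P}_{\mathfrak{m}}^{\sharp}$ to every $\mathfrak{m}\in\mathcal{M}_1$ by a standard $\varepsilon/3$ argument, and to pass the Lipschitz estimate to the limit, yielding local Lipschitz continuity of $\mathfrak{m}\mapsto\mathrm{P}_{\mathfrak{m}}^{\sharp}$.

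The only delicate point is to make sure that the factor $|\Lambda_l|^2$ arising from the quadratic long--range term is entirely absorbed by the two prefactors $|\Lambda_l|^{-1}$ (one from $U_l$ itself, one from the pressure normalization), and that the subsequent integration in $a$ produces genuine $\mathcal{L}^2$ norms rather than $\mathcal{L}^1$ quantities, which is precisely what the Cauchy--Schwarz step delivers; once this bookkeeping is done correctly, the rest is routine.
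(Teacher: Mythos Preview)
Your proof is correct and follows essentially the same approach as the paper: both use the trace inequality (\ref{petite inequality}) to reduce to bounding $|\Lambda_l|^{-1}\|U_{l}(\mathfrak{m}_1)-U_{l}(\mathfrak{m}_2)\|$, and then establish the uniform estimate $\|U_{l,1}-U_{l,2}\|\leq|\Lambda_l|\,\|\mathfrak{m}_1-\mathfrak{m}_2\|_{\mathcal{M}_1}(1+\|\mathfrak{m}_1\|_{\mathcal{M}_1}+\|\mathfrak{m}_2\|_{\mathcal{M}_1})$ (which the paper simply states as (\ref{bound BCS norms}) while you spell out the local/long-range splitting and the Cauchy--Schwarz step). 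Your explicit remark that the existence of $\mathrm{P}_{\mathfrak{m}}^{\sharp}$ is needed before passing the Lipschitz bound to the limit is a fair point of rigor that the paper leaves implicit.
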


\begin{proof}
For any $\mathfrak{m}_{1},\mathfrak{m}_{2}\in \mathcal{M}_{1}$ observe that
the corresponding internal energies $U_{l,1}$ and $U_{l,2}$\ (Definition \ref%
{definition BCS-type model}) satisfy the bound%
\begin{equation}
\left\Vert U_{l,1}-U_{l,2}\right\Vert \leq \left\vert \Lambda
_{l}\right\vert \left\Vert \mathfrak{m}_{1}-\mathfrak{m}_{2}\right\Vert _{%
\mathcal{M}_{1}}\left( 1+\left\Vert \mathfrak{m}_{1}\right\Vert _{\mathcal{M}%
_{1}}+\left\Vert \mathfrak{m}_{2}\right\Vert _{\mathcal{M}_{1}}\right) .
\label{bound BCS norms}
\end{equation}%
In particular, the map $\mathfrak{m}\longmapsto U_{l}$ is continuous at
fixed $l\in \mathbb{N}$. For each sequence $\{\mathfrak{m}_{n}\}_{n\in 
\mathbb{N}}\subseteq \mathcal{M}_{1}$ converging to $\mathfrak{m}$, from (%
\ref{bound BCS norms}) and the bound (\ref{petite inequality}), that is, in
this special case, 
\begin{eqnarray}
\left\vert p_{l,\mathfrak{m}_{1}}-p_{l,\mathfrak{m}_{2}}\right\vert &=&\frac{%
1}{\beta |\Lambda _{l}|}\left\vert \ln \mathrm{Trace}_{\wedge \mathcal{H}%
_{\Lambda _{l}}}(\mathrm{e}^{-\beta U_{l,1}})-\ln \mathrm{Trace}_{\wedge 
\mathcal{H}_{\Lambda _{l}}}(\mathrm{e}^{-\beta U_{l,2}})\right\vert  \notag
\\
&\leq &\frac{1}{|\Lambda _{l}|}\left\Vert U_{l,1}-U_{l,2}\right\Vert ,
\label{petite inequality2}
\end{eqnarray}%
we obtain the upper bound 
\begin{equation*}
\left\vert p_{l,\mathfrak{m}_{n}}-p_{l,\mathfrak{m}}\right\vert \leq
\left\Vert \mathfrak{m}_{n}-\mathfrak{m}\right\Vert _{\mathcal{M}_{1}}\left(
1+\left\Vert \mathfrak{m}_{n}\right\Vert _{\mathcal{M}_{1}}+\left\Vert 
\mathfrak{m}\right\Vert _{\mathcal{M}_{1}}\right) .
\end{equation*}%
This bound leads to the equicontinuity of the family of maps $\mathfrak{m}%
\mapsto p_{l}$ for $l\in \mathbb{N}$ and the locally Lipschitz continuity of
the map $\mathfrak{m}\mapsto \mathrm{P}_{\mathfrak{m}}^{\sharp }$. 
\end{proof}%

\begin{lemma}[Equicontinuity of the map $\mathfrak{m}\mapsto f_{\mathfrak{m}%
}^{\sharp }(\protect\rho )$]
\label{lemma reduction dfbis}\mbox{ }\newline
The family of maps $\mathfrak{m}\mapsto f_{\mathfrak{m}}^{\sharp }(\rho )$
is equicontinuous for $\rho \in E_{1}$. Then, for any sequence $\{\mathfrak{m%
}_{n}\}_{n\in \mathbb{N}}\subseteq \mathcal{M}_{1}$ converging to $\mathfrak{%
m}\in \mathcal{M}_{1}$,%
\begin{equation*}
\inf\limits_{\rho \in E_{1}}\,f_{\mathfrak{m}}^{\sharp }(\rho )=\underset{%
n\rightarrow \infty }{\lim }\inf\limits_{\rho \in E_{1}}\,f_{\mathfrak{m}%
_{n}}^{\sharp }(\rho ).
\end{equation*}
\end{lemma}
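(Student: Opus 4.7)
\textbf{Proof plan for Lemma~\ref{lemma reduction dfbis}.} The plan is to establish a locally uniform (in $\rho\in E_1$) Lipschitz estimate of the form
\[
|f_{\mathfrak{m}_{1}}^{\sharp}(\rho)-f_{\mathfrak{m}_{2}}^{\sharp}(\rho)|\ \leq\ C\bigl(\Vert\mathfrak{m}_{1}\Vert_{\mathcal{M}_{1}},\Vert\mathfrak{m}_{2}\Vert_{\mathcal{M}_{1}}\bigr)\,\Vert\mathfrak{m}_{1}-\mathfrak{m}_{2}\Vert_{\mathcal{M}_{1}},
\]
which immediately yields equicontinuity. The second assertion then follows by the standard fact that if a family of real functions on a set converges uniformly, then their infima converge.

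First I would write
\[
f_{\mathfrak{m}}^{\sharp}(\rho)-f_{\mathfrak{m}'}^{\sharp}(\rho)=\bigl(\Vert\Delta_{a,+}(\rho)\Vert_{1}-\Vert\Delta'_{a,+}(\rho)\Vert_{1}\bigr)-\bigl(\Vert\Delta_{a,-}(\rho)\Vert_{1}-\Vert\Delta'_{a,-}(\rho)\Vert_{1}\bigr)+\bigl(e_{\Phi}(\rho)-e_{\Psi}(\rho)\bigr),
\]
observing that the entropy density $\beta^{-1}s(\rho)$ depends only on $\rho$ and hence cancels. The local energy difference is controlled by Lemma~\ref{Th.en.func}(ii), which gives $|e_{\Phi}(\rho)-e_{\Psi}(\rho)|\leq \Vert\Phi-\Psi\Vert_{\mathcal{W}_{1}}\leq \Vert\mathfrak{m}-\mathfrak{m}'\Vert_{\mathcal{M}_{1}}$, uniformly in $\rho$.

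The main step is to bound the long--range contributions. Recalling (\ref{definition of delta long range}) and writing $A_{a}:=\mathfrak{e}_{\Phi_{a}}+i\mathfrak{e}_{\Phi'_{a}}$, $B_{a}:=\mathfrak{e}_{\Psi_{a}}+i\mathfrak{e}_{\Psi'_{a}}$, Theorem~\ref{Lemma1.vonN copy(4)}(ii) gives the pointwise (in $a$) estimate
\[
\bigl|\Delta_{A_{a}}(\rho)-\Delta_{B_{a}}(\rho)\bigr|\leq \bigl(\Vert A_{a}\Vert+\Vert B_{a}\Vert\bigr)\Vert A_{a}-B_{a}\Vert.
\]
Using the norm estimate $\Vert\mathfrak{e}_{\Phi_{a}}\Vert\leq\Vert\Phi_{a}\Vert_{\mathcal{W}_{1}}$ from (\ref{eq:enpersite bounded}) and $|\gamma_{a,\pm}|\leq 1$, this yields
\[
\bigl|\Vert\Delta_{a,\pm}(\rho)\Vert_{1}-\Vert\Delta'_{a,\pm}(\rho)\Vert_{1}\bigr|\leq \int_{\mathcal{A}}\bigl(\Vert\Phi_{a}\Vert_{\mathcal{W}_{1}}+\Vert\Phi'_{a}\Vert_{\mathcal{W}_{1}}+\Vert\Psi_{a}\Vert_{\mathcal{W}_{1}}+\Vert\Psi'_{a}\Vert_{\mathcal{W}_{1}}\bigr)\bigl(\Vert\Phi_{a}-\Psi_{a}\Vert_{\mathcal{W}_{1}}+\Vert\Phi'_{a}-\Psi'_{a}\Vert_{\mathcal{W}_{1}}\bigr)\,\mathrm{d}\mathfrak{a}(a),
\]
which, by Cauchy--Schwarz in $\mathcal{L}^{2}(\mathcal{A},\mathcal{W}_{1})$, is dominated by a constant multiple of $\bigl(\Vert\mathfrak{m}\Vert_{\mathcal{M}_{1}}+\Vert\mathfrak{m}'\Vert_{\mathcal{M}_{1}}\bigr)\Vert\mathfrak{m}-\mathfrak{m}'\Vert_{\mathcal{M}_{1}}$, independently of $\rho\in E_{1}$. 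Combining the three bounds gives the desired equicontinuity.

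Finally, given $\mathfrak{m}_{n}\to\mathfrak{m}$, equicontinuity implies $f_{\mathfrak{m}_{n}}^{\sharp}\to f_{\mathfrak{m}}^{\sharp}$ uniformly on $E_{1}$; a two--line $\varepsilon/2$--argument (controlling $\inf f_{\mathfrak{m}_{n}}^{\sharp}$ from above using a near--minimizer of $f_{\mathfrak{m}}^{\sharp}$ and conversely) then yields $\inf_{E_{1}}f_{\mathfrak{m}_{n}}^{\sharp}\to\inf_{E_{1}}f_{\mathfrak{m}}^{\sharp}$. The only mild obstacle is ensuring the Cauchy--Schwarz step produces a bound in the correct $\mathcal{M}_{1}$--norm (rather than in some stronger $\mathcal{L}^{2}$--type norm), but this is guaranteed by the definition $\Vert\mathfrak{m}\Vert_{\mathcal{M}_{1}}=\Vert\Phi\Vert_{\mathcal{W}_{1}}+\Vert\Phi_{a}\Vert_{2}+\Vert\Phi'_{a}\Vert_{2}$ in Definition~\ref{definition M1bis}.
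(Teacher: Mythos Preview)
Your proposal is correct and follows essentially the same route as the paper's proof: cancel the entropy, bound the local energy term via Lemma~\ref{Th.en.func}(ii), and control the long--range contribution via the pointwise Lipschitz estimate of Theorem~\ref{Lemma1.vonN copy(4)}(ii) together with Cauchy--Schwarz in $\mathcal{L}^{2}(\mathcal{A},\mathcal{W}_{1})$ to arrive at a bound of the form $\Vert\mathfrak{m}_{1}-\mathfrak{m}_{2}\Vert_{\mathcal{M}_{1}}(1+\Vert\mathfrak{m}_{1}\Vert_{\mathcal{M}_{1}}+\Vert\mathfrak{m}_{2}\Vert_{\mathcal{M}_{1}})$. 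The paper merely states this inequality without spelling out the intermediate steps you give, so your write-up is in fact a more detailed rendering of the same argument.
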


\begin{proof}%
This lemma is a consequence of the norm equicontinuity of the family of maps 
$\Phi \mapsto e_{\Phi }(\rho )$ and 
\begin{equation*}
\mathfrak{m}\mapsto \left\Vert \Delta _{a,+}\left( \rho \right) \right\Vert
_{1}-\left\Vert \Delta _{a,-}\left( \rho \right) \right\Vert _{1}
\end{equation*}%
for $\rho \in E_{1}$. Indeed, for all $\mathfrak{m}_{1},\mathfrak{m}_{2}\in 
\mathcal{M}_{1}$ and $\rho \in E_{1}$, the corresponding functionals $\Delta
_{a,\pm }^{(1)}$ and $\Delta _{a,\pm }^{(2)}$ satisfy the inequality%
\begin{eqnarray*}
&&\left\vert \int_{\mathcal{A}}\Delta _{a,\pm }^{(1)}\left( \rho \right) 
\mathrm{d}\mathfrak{a}\left( a\right) -\int_{\mathcal{A}}\Delta _{a,\pm
}^{(2)}\left( \rho \right) \mathrm{d}\mathfrak{a}\left( a\right) \right\vert
\\
&\leq &\left\Vert \mathfrak{m}_{1}-\mathfrak{m}_{2}\right\Vert _{\mathcal{M}%
_{1}}\left( 1+\left\Vert \mathfrak{m}_{1}\right\Vert _{\mathcal{M}%
_{1}}+\left\Vert \mathfrak{m}_{2}\right\Vert _{\mathcal{M}_{1}}\right)
\end{eqnarray*}%
for all $\rho \in E_{1}$. 
\end{proof}%

Therefore, by using Lemmata \ref{lemma reduction df}--\ref{lemma reduction
dfbis}, we can assume, without loss of generality, that 
\begin{equation*}
\mathfrak{m}:=\{\Phi \}\cup \{\Phi _{k},\Phi _{k}^{\prime }\}_{k=1}^{N}\in 
\mathcal{M}_{1}^{\mathrm{df}}:=\mathcal{M}_{1}^{\mathrm{d}}\cap \mathcal{M}%
_{1}^{\mathrm{f}}\subseteq \mathcal{M}_{1}
\end{equation*}%
in order to prove Theorem \ref{BCS main theorem 1}. Indeed, using the
density of the set $\mathcal{M}_{1}^{\mathrm{df}}$ in $\mathcal{M}_{1}$, we
deduce from Lemmata \ref{lemma reduction df} and \ref{lemma reduction dfbis}
the following corollary:

\begin{corollary}[Reduction to discrete finite range models]
\label{lemma reduction dfbisbis}\mbox{ }\newline
For any $\mathfrak{m}\in \mathcal{M}_{1}$, there exists a sequence $\{%
\mathfrak{m}_{n}\}_{n\in \mathbb{N}}\subseteq \mathcal{M}_{1}^{\mathrm{df}}$
converging to $\mathfrak{m}\in \mathcal{M}_{1}$ such that 
\begin{equation*}
\mathrm{P}_{\mathfrak{m}}^{\sharp }=\underset{n\rightarrow \infty }{\lim }%
\mathrm{P}_{\mathfrak{m}_{n}}^{\sharp }\text{\quad and\quad }%
\inf\limits_{\rho \in E_{1}}\,f_{\mathfrak{m}}^{\sharp }(\rho )=\underset{%
n\rightarrow \infty }{\lim }\inf\limits_{\rho \in E_{1}}\,f_{\mathfrak{m}%
_{n}}^{\sharp }(\rho ).
\end{equation*}
\end{corollary}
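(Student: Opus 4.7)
The plan is to combine a density argument with the two continuity statements established just before the corollary. First I would invoke the density of $\mathcal{M}_1^{\mathrm{df}} = \mathcal{M}_1^{\mathrm{d}} \cap \mathcal{M}_1^{\mathrm{f}}$ in $\mathcal{M}_1$ (already noted at the beginning of Section \ref{section reduction of the problem}, following from the density of $\mathcal{W}_1^{\mathrm{f}}$ in $\mathcal{W}_1$ together with a step-function/Lebesgue approximation in the $\mathcal{L}^2(\mathcal{A},\mathcal{W}_1)$ factors) to choose any sequence $\{\mathfrak{m}_n\}_{n\in\mathbb{N}} \subseteq \mathcal{M}_1^{\mathrm{df}}$ with $\|\mathfrak{m}_n - \mathfrak{m}\|_{\mathcal{M}_1} \to 0$.

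Then I would apply Lemma \ref{lemma reduction df} to such a sequence: it tells us that $\mathfrak{m}\mapsto \mathrm{P}_{\mathfrak{m}}^{\sharp}$ is locally Lipschitz continuous from $\mathcal{M}_1$ to $\mathbb{R}$, so $\mathrm{P}_{\mathfrak{m}_n}^{\sharp}\to \mathrm{P}_{\mathfrak{m}}^{\sharp}$. For the variational problem, I would apply Lemma \ref{lemma reduction dfbis}, which asserts precisely that if $\mathfrak{m}_n \to \mathfrak{m}$ in $\mathcal{M}_1$ then
\begin{equation*}
\inf_{\rho \in E_1} f_{\mathfrak{m}}^{\sharp}(\rho) = \lim_{n\to\infty} \inf_{\rho \in E_1} f_{\mathfrak{m}_n}^{\sharp}(\rho),
\end{equation*}
which is the second equality. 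Both equalities then hold simultaneously for the same approximating sequence.

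There is no real obstacle here: the corollary is essentially a packaging of Lemmata \ref{lemma reduction df}--\ref{lemma reduction dfbis} together with the density of $\mathcal{M}_1^{\mathrm{df}}$. The only thing one should be a bit careful about is ensuring that the approximating sequence can indeed be chosen inside $\mathcal{M}_1^{\mathrm{df}}$ (and not merely in $\mathcal{M}_1^{\mathrm{d}}$ or $\mathcal{M}_1^{\mathrm{f}}$ separately), but this is immediate since one may first truncate the range of all interactions involved and then discretize the parameter $a\in\mathcal{A}$ via step functions, each approximation being controlled in the relevant $\mathcal{W}_1$- and $\mathcal{L}^2$-norms by Lebesgue's dominated convergence theorem.
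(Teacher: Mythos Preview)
Your proposal is correct and matches the paper's approach exactly: the paper does not give a separate proof of the corollary but simply states that it follows from the density of $\mathcal{M}_{1}^{\mathrm{df}}$ in $\mathcal{M}_{1}$ together with Lemmata \ref{lemma reduction df} and \ref{lemma reduction dfbis}, which is precisely what you do.
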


\section{Passivity of Gibbs states and thermodynamics\label{Section
passivity of gibbs states}}

\index{States!Gibbs}%
\index{Passivity of Gibbs states}From Theorem \ref{passivity.Gibbs}, the
pressure $p_{l}=p_{l,\mathfrak{m}}$ (\ref{BCS pressure}) of any finite range
discrete model $\mathfrak{m}\in \mathcal{M}_{1}^{\mathrm{df}}$ is bounded
from below, for all states $\rho \in E$, by (\ref{BCS equation 5}) with
Equality (\ref{BCS equation 4}) for $\rho =\rho _{l}$. Recall that $\rho
_{l}:=\rho _{\Lambda _{l},U_{l}}$ is the Gibbs equilibrium state (\ref%
{Gibbs.state}) with internal energy $U_{l}$ defined in Definition \ref%
{definition BCS-type model} for any $\mathfrak{m}\in \mathcal{M}_{1}$ and $%
l\in \mathbb{N}$. This even state $\rho _{l}$ is seen as defined either on
the local algebra $\mathcal{U}_{\Lambda _{l}}$ or on the whole algebra $%
\mathcal{U}$ by periodically extending it (with period $(2l+1)$ in each
direction of the lattice $\mathfrak{L}$).

Thus, for any $\mathfrak{m}\in \mathcal{M}_{1}^{\mathrm{df}}$, the lower
bound on the pressure $p_{l}$ in the thermodynamic limit is found by
studying the r.h.s. of (\ref{BCS equation 5}) as $l\rightarrow \infty $:

\begin{lemma}[Thermodynamic limit of the pressure $p_{l}$ -- lower bound]
\label{Lemma lower bound pressure copy(1)}\mbox{ }\newline
For any $\mathfrak{m}\in \mathcal{M}_{1}^{\mathrm{df}}$, 
\begin{equation*}
\liminf\limits_{l\rightarrow \infty }p_{l}\geq -\inf\limits_{\rho \in
E_{1}}\,f_{\mathfrak{m}}^{\sharp }(\rho ),
\end{equation*}%
with the free--energy density functional $f_{\mathfrak{m}}^{\sharp }$
defined in Definition \ref{Free-energy density long range}.
\end{lemma}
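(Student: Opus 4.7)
The plan is to use the passivity of Gibbs states in the form of inequality (\ref{BCS equation 5}) as the sole lower bound, test it against an arbitrary translation invariant state $\rho \in E_{1}$, and then take $l \to \infty$ term by term. Since $\mathfrak{m}\in \mathcal{M}_{1}^{\mathrm{df}}$ is discrete and finite range, the integral over $\mathcal{A}$ in (\ref{BCS equation 5}) reduces to a finite sum $\sum_{k=1}^{N}$, and each of the finitely many interactions $\Phi, \Phi_{k}, \Phi_{k}^{\prime }$ is finite range. Writing $B_{k}:=\mathfrak{e}_{\Phi_{k}}+i\mathfrak{e}_{\Phi_{k}^{\prime }}\in \mathcal{U}_{0}$ and noting that $\gamma_{k}\in \{-1,1\}$, the inequality (\ref{BCS equation 5}) reads, for every $\rho \in E_{1}$,
\begin{equation*}
p_{l}\geq -\sum_{k=1}^{N}\frac{\gamma_{k}}{|\Lambda_{l}|^{2}}\rho \!\left( (U_{\Lambda_{l}}^{\Phi_{k}}+iU_{\Lambda_{l}}^{\Phi_{k}^{\prime }})^{\ast }(U_{\Lambda_{l}}^{\Phi_{k}}+iU_{\Lambda_{l}}^{\Phi_{k}^{\prime }})\right) -\frac{1}{|\Lambda_{l}|}\rho (U_{\Lambda_{l}}^{\Phi })+\frac{1}{\beta |\Lambda_{l}|}S(\rho_{\Lambda_{l}}).
\end{equation*}

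The three terms on the right are handled separately. For the entropy term, Lemma \ref{lemma property entropybis} (with $\vec{\ell}=(1,\ldots ,1)$) gives $|\Lambda_{l}|^{-1}S(\rho_{\Lambda_{l}})\to s(\rho )$. For the local energy term, Lemma \ref{lemma energy density exists} yields $|\Lambda_{l}|^{-1}\rho (U_{\Lambda_{l}}^{\Phi })\to e_{\Phi }(\rho )$. For the long--range quadratic term I would argue that, because each $\Phi_{k}, \Phi_{k}^{\prime }$ is finite range, the operator $|\Lambda_{l}|^{-1}U_{\Lambda_{l}}^{\Phi_{k}}$ differs from the space average $(\mathfrak{e}_{\Phi_{k}})_{l}$ (cf.\ (\ref{definition de A L})) only by a boundary contribution of norm $\mathcal{O}(|\partial \Lambda_{l}|/|\Lambda_{l}|)=\mathcal{O}(l^{-1})$; the same holds for $\Phi_{k}^{\prime }$. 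Consequently,
\begin{equation*}
\frac{1}{|\Lambda_{l}|^{2}}\rho \!\left( (U_{\Lambda_{l}}^{\Phi_{k}}+iU_{\Lambda_{l}}^{\Phi_{k}^{\prime }})^{\ast }(U_{\Lambda_{l}}^{\Phi_{k}}+iU_{\Lambda_{l}}^{\Phi_{k}^{\prime }})\right) =\rho \!\left( (B_{k})_{l}^{\ast }(B_{k})_{l}\right) +\mathcal{O}(l^{-1})
\end{equation*}
and the leading term converges to $\Delta_{B_{k}}(\rho )$ by Definition \ref{definition de deltabis} and Lemma \ref{Lemma1.vonN copy(2)}.

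Combining these three limits, for every fixed $\rho \in E_{1}$,
\begin{equation*}
\liminf_{l\to \infty }p_{l}\geq -\sum_{k=1}^{N}\gamma_{k}\Delta_{B_{k}}(\rho )-e_{\Phi }(\rho )+\beta^{-1}s(\rho ).
\end{equation*}
Splitting $\gamma_{k}=\gamma_{k,+}-\gamma_{k,-}$ and using $\Delta_{a,\pm }(\rho )=\gamma_{a,\pm }\Delta_{B_{a}}(\rho )$ (Equation (\ref{definition of delta long range})), the sum $\sum_{k}\gamma_{k}\Delta_{B_{k}}(\rho )$ is exactly $\|\Delta_{a,+}(\rho )\|_{1}-\|\Delta_{a,-}(\rho )\|_{1}$ for the counting measure associated with the discrete model. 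Thus the right--hand side equals $-f_{\mathfrak{m}}^{\sharp }(\rho )$ by Definition \ref{Free-energy density long range}. Since $\rho \in E_{1}$ is arbitrary, taking the supremum over $E_{1}$ yields $\liminf_{l\to \infty }p_{l}\geq -\inf_{\rho \in E_{1}}f_{\mathfrak{m}}^{\sharp }(\rho )$.

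The only non--routine step is the control of the long--range quadratic term, where one must verify that the boundary corrections are truly negligible uniformly in $\rho \in E_{1}$ and commute with the limit defining $\Delta_{B_{k}}$; this is where the finite--range assumption on $\Phi_{k},\Phi_{k}^{\prime }$ is crucial, and is the reason we first reduced to $\mathcal{M}_{1}^{\mathrm{df}}$ via Corollary \ref{lemma reduction dfbisbis}. All other ingredients (entropy, local energy, affinity and semi--continuity of $\Delta$) are already established in Chapter \ref{section set of invarant states}.
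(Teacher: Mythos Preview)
Your proof is correct and follows essentially the same route as the paper's: start from the passivity inequality (\ref{BCS equation 5}) with a fixed $\rho\in E_{1}$, replace $|\Lambda_{l}|^{-1}U_{\Lambda_{l}}^{\Phi_{k}}$ by the space average $(\mathfrak{e}_{\Phi_{k}})_{l}$ up to a vanishing boundary term, and pass to the limit term by term using Definitions~\ref{definition de deltabis}, \ref{entropy.density}, \ref{definition energy density}. The paper packages the boundary estimate as the operator--norm limit (\ref{eq sup mean enerfybis0})--(\ref{eq sup mean enerfybis}), proved via Lebesgue dominated convergence for arbitrary $\Phi\in\mathcal{W}_{1}$ rather than your $\mathcal{O}(l^{-1})$ rate for finite range, but since we are in $\mathcal{M}_{1}^{\mathrm{df}}$ either version suffices.
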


\begin{proof}
The first term in the r.h.s. of (\ref{BCS equation 5}) is the only one we
really need to control. To this purpose, observe that, for any $\Phi \in 
\mathcal{W}_{1}$ and $l\in \mathbb{N}$, the space--average $%
\widehat{\mathfrak{e}}_{\Phi ,l}$ (\ref{perm inv 1bis}) of the energy
observable $\mathfrak{e}_{\Phi }$ (\ref{eq:enpersite}) is obviously a
bounded operator. Hence, by using 
\begin{equation*}
\widehat{\mathfrak{e}}_{\Phi ,l}-|\Lambda _{l}|^{-1}U_{\Lambda _{l}}^{\Phi
}=\sum\limits_{\Lambda \in \mathcal{P}_{f}(\mathfrak{L}),\Lambda \ni 0}\frac{%
1}{|\Lambda _{l}|}\sum\limits_{x\in \Lambda _{l}}\mathbf{1}_{\left\{ \Lambda
\nsubseteq (\Lambda _{l}-x)\right\} }\frac{\Phi _{\Lambda +x}}{|\Lambda |},
\end{equation*}%
$\Vert \Phi _{\Lambda +x}\Vert =\Vert \Phi _{\Lambda }\Vert $, $\Vert \Phi
\Vert _{\mathcal{W}_{1}}<\infty $, and Lebesgue's dominated convergence
theorem, we have that 
\begin{equation}
\underset{l\rightarrow \infty }{\lim }\left\Vert \widehat{\mathfrak{e}}%
_{\Phi ,l}-|\Lambda _{l}|^{-1}U_{\Lambda _{l}}^{\Phi }\right\Vert =0.
\label{eq sup mean enerfybis0}
\end{equation}%
Therefore, by using the definition $\mathfrak{C}_{l}:=U_{\Lambda _{l}}^{\Phi
}+iU_{\Lambda _{l}}^{\Phi ^{\prime }}$ for any $l\in \mathbb{N}$ and any
finite range interaction $\Phi \in \mathcal{W}_{1}^{\mathrm{f}}$, we obtain%
\begin{equation}
\underset{l\rightarrow \infty }{\lim }\left\{ \frac{1}{|\Lambda _{l}|^{2}}%
\rho (\mathfrak{C}_{l}^{\ast }\mathfrak{C}_{l})-\rho ((\widehat{\mathfrak{e}}%
_{\Phi ,l}+i\widehat{\mathfrak{e}}_{\Phi ^{\prime },l})^{\ast }(\widehat{%
\mathfrak{e}}_{\Phi ,l}+i\widehat{\mathfrak{e}}_{\Phi ^{\prime
},l}))\right\} =0  \label{eq sup mean enerfybis}
\end{equation}%
uniformly in $\rho \in E$. Consequently, the lower bound on the pressure $%
p_{l}$ as $l\rightarrow \infty $ follows from (\ref{BCS equation 5})
combined with Definitions \ref{definition de deltabis}, \ref{entropy.density}%
, \ref{definition energy density}, and (\ref{eq sup mean enerfybis}).%
\end{proof}%

In order to obtain the upper bound on the $\limsup $ of the pressure $p_{l}$%
, as in the proof of Theorem \ref{theorem pressure perm inv}, one needs to
control each term in (\ref{BCS equation 4}) when $l\rightarrow \infty $.
Observe that $\rho _{l}$\ is \emph{generally not t.i.} even if $\mathfrak{m}%
\in \mathcal{M}_{1}^{\mathrm{df}}$ is t.i., by definition. But, we can
canonically construct a space--averaged t.i. Gibbs state $\hat{\rho}_{l}$
from $\rho _{l}$, see (\ref{t.i. state rho l}). If we restrict ourselves to
the case of models with purely repulsive long--range interactions (i.e. $%
\Phi _{a,-}=\Phi _{a,-}^{\prime }=0$ (a.e.)), we can analyze each term in (%
\ref{BCS equation 4}) as a function of $\hat{\rho}_{l}\in E_{1}$ in the
limit $l\rightarrow \infty $. The mean entropy per volume as a function of
the t.i. state $\hat{\rho}_{l}$ (\ref{t.i. state rho l}) is already given in
the proof of Theorem \ref{theorem pressure perm inv} by Equality (\ref{mean
entropy per volume}). The analysis of the other terms is, however, more
involved than for permutation invariant models (Definition \ref{Definition
permutation inv models}). The first term of the r.h.s. of (\ref{BCS equation
4}) being the most problematic one if we tries to use the space--averaged
t.i. Gibbs state $\hat{\rho}_{l}$ as test states.

We now prove that, at large $l$, the internal energy computed from a large
box $\Lambda _{l}^{(n)}$ (\ref{lambda_n_l}) is the same as the one for $%
|\Lambda _{n}|$ copies of boxes of volume $|\Lambda _{l}|$. This is a
standard method often used in statistical mechanics to prove the existence
of the thermodynamic limit.

\begin{lemma}[Internal energy]
\label{lemma internal energy}\mbox{ }\newline
For any finite range t.i. interaction $\Phi \in \mathcal{W}_{1}^{\mathrm{f}}$%
,%
\begin{equation*}
\underset{n\in \mathbb{N}}{\sup }\left\{ \frac{1}{|\Lambda _{l}^{(n)}|}\Vert
U_{\Lambda _{l}^{(n)}}^{\Phi }-\sum\limits_{x\in \Lambda _{n}}U_{\Lambda
_{l}+(2l+1)x}^{\Phi }\Vert \right\} =\mathcal{O}(l^{-1}).
\end{equation*}
\end{lemma}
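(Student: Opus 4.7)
The plan is to identify the operator $U_{\Lambda_l^{(n)}}^{\Phi} - \sum_{x \in \Lambda_n} U_{\Lambda_l + (2l+1)x}^{\Phi}$ as a sum of interaction terms $\Phi_{\Lambda'}$ coming from subsets $\Lambda' \subseteq \Lambda_l^{(n)}$ that straddle the interfaces between the translated boxes $\Lambda_l + (2l+1)x$, and then to show that these \emph{interface terms} contribute a volume of order $|\Lambda_n| \cdot l^{d-1}$ while $|\Lambda_l^{(n)}| = |\Lambda_n| \cdot |\Lambda_l|$ is of order $|\Lambda_n| \cdot l^d$, yielding the $\mathcal{O}(l^{-1})$ bound.

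More precisely, using Definition \ref{definition standard interaction} (ii) together with the fact that the $\{\Lambda_l + (2l+1)x\}_{x \in \Lambda_n}$ form a disjoint decomposition of $\Lambda_l^{(n)}$, I would write
\begin{equation*}
U_{\Lambda_l^{(n)}}^{\Phi} - \sum_{x \in \Lambda_n} U_{\Lambda_l + (2l+1)x}^{\Phi}
= \sum_{\Lambda' \in \mathcal{I}_{l,n}} \Phi_{\Lambda'},
\end{equation*}
where $\mathcal{I}_{l,n}$ is the set of $\Lambda' \in \mathcal{P}_f(\Lambda_l^{(n)})$ not entirely contained in any single translate $\Lambda_l + (2l+1)x$. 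Since $\Phi$ is finite range, there exists $R < \infty$ such that $\Phi_{\Lambda'} = 0$ whenever the diameter $\text{\o}(\Lambda') > R$, so only $\Lambda' \in \mathcal{I}_{l,n}$ with $\text{\o}(\Lambda') \leq R$ contribute.

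The next step is the geometric estimate. Any such contributing $\Lambda'$ must contain two points lying in distinct translates of $\Lambda_l$ separated by one of the hyperplanes $x_j \in (2l+1)\mathbb{Z} + l + 1/2$, and hence every point of $\Lambda'$ lies within distance $R$ of such a hyperplane. Denoting by $\mathcal{B}_{l,n}^R \subseteq \Lambda_l^{(n)}$ the $R$--neighbourhood of the union of these interfaces inside $\Lambda_l^{(n)}$, one has the elementary bound $|\mathcal{B}_{l,n}^R| \leq C_R \cdot |\Lambda_n| \cdot l^{d-1}$ for some constant $C_R$ depending only on $R$ and $d$, because each of the $\mathcal{O}(|\Lambda_n|)$ interfacial faces has area $\mathcal{O}(l^{d-1})$ and thickness $R$. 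Then
\begin{equation*}
\Bigl\| \sum_{\Lambda' \in \mathcal{I}_{l,n}} \Phi_{\Lambda'} \Bigr\|
\leq \sum_{x \in \mathcal{B}_{l,n}^R} \, \sum_{\Lambda \ni x} \|\Phi_\Lambda\|
\leq |\mathcal{B}_{l,n}^R| \cdot \sum_{\Lambda \ni 0} \|\Phi_\Lambda\|,
\end{equation*}
using translation invariance of $\Phi$ in the last step. The sum $\sum_{\Lambda \ni 0} \|\Phi_\Lambda\|$ is finite (and in fact involves only finitely many terms by the finite range assumption), so it is bounded by a constant $C_\Phi < \infty$ depending only on $\Phi$.

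Combining these estimates and dividing by $|\Lambda_l^{(n)}| = |\Lambda_n| \cdot |\Lambda_l| = |\Lambda_n| \cdot (2l+1)^d$ gives
\begin{equation*}
\frac{1}{|\Lambda_l^{(n)}|} \Bigl\| U_{\Lambda_l^{(n)}}^{\Phi} - \sum_{x \in \Lambda_n} U_{\Lambda_l + (2l+1)x}^{\Phi} \Bigr\|
\leq \frac{C_R \, C_\Phi \, |\Lambda_n| \, l^{d-1}}{|\Lambda_n| \, (2l+1)^d}
= \mathcal{O}(l^{-1}),
\end{equation*}
uniformly in $n \in \mathbb{N}$. No real obstacle arises; the whole argument is a volume--versus--surface bookkeeping, and the only point requiring a little care is to make sure that the constants $C_R$ and $C_\Phi$ are chosen independently of $n$, which follows immediately from translation invariance of $\Phi$ and the product structure of $\Lambda_l^{(n)}$.
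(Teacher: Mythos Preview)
Your proof is correct and follows essentially the same surface--versus--volume argument as the paper's own proof. The paper's version is terser --- it bounds the difference directly by $\frac{|\Lambda_n|}{|\Lambda_l^{(n)}|}\sum_{\Lambda \subseteq \partial\Lambda_l}\Vert\Phi_\Lambda\Vert \leq \frac{|\partial\Lambda_l|}{|\Lambda_l|}\Vert\Phi\Vert_{\mathcal{W}_1}$ with $\partial\Lambda_l$ the boundary of suitable thickness --- but the underlying idea (interface terms contribute a factor $|\Lambda_n|\cdot l^{d-1}$ against a total volume $|\Lambda_n|\cdot l^d$) is identical to yours.
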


\begin{proof}
From Definition \ref{definition standard interaction} (ii) of $U_{\Lambda
}^{\Phi }$, it is straightforward to check, for any t.i. finite range
interaction $\Phi $, that 
\begin{eqnarray*}
\frac{1}{|\Lambda _{l}^{(n)}|}\Vert U_{\Lambda _{l}^{(n)}}^{\Phi
}-\sum\limits_{x\in \Lambda _{n}}U_{\Lambda _{l}+(2l+1)x}^{\Phi }\Vert &\leq
&\frac{|\Lambda _{n}|}{|\Lambda _{l}^{(n)}|}\sum\limits_{\Lambda \subseteq
\partial \Lambda _{l}}\Vert \Phi _{\Lambda }\Vert \\
&\leq &\frac{|\partial \Lambda _{l}|}{|\Lambda _{l}|}\Vert \Phi \Vert _{%
\mathcal{W}_{1}}=\mathcal{O}(l^{-1})
\end{eqnarray*}%
with $\partial \Lambda _{l}$ being the boundary\footnote{%
By fixing $m\geq 1$ the boundary $\partial \Lambda $ of any $\Lambda \subset
\Gamma $ is defined by $\partial \Lambda :=\{x\in \Lambda \;:\;\exists y\in
\Gamma \backslash \Lambda \mathrm{\ with\ }d(x,y)\leq m\},$ see (\ref%
{def.dist}) for the definition of the metric $d(x,y)$.\label{footnote20
copy(2)}}$\ $of the cubic box $\Lambda _{l}$ defined for large enough $m\geq
1$.%
\end{proof}%

As a consequence, as far as the limit $l\rightarrow \infty $ is concerned
one can use, for all $\Phi \in \mathcal{W}_{1}$, the energy density $e_{\Phi
}(\hat{\rho}_{l})$ instead of the mean internal energy per volume $\rho
_{l}\left( U_{\Lambda _{l}}^{\Phi }\right) /|\Lambda _{l}|$. (Recall that $%
\hat{\rho}_{l}\in E_{1}$ is the t.i. state (\ref{t.i. state rho l}).)
Indeed, one deduces from Lemma \ref{lemma internal energy} the following
result:

\begin{lemma}[Mean internal energy per volume as $\mathbb{\ell }\rightarrow
\infty $]
\label{lemma mean energy trivial-1}\mbox{ }\newline
For any $\mathfrak{m}\in \mathcal{M}_{1}$ and all finite range interactions $%
\Phi \in \mathcal{W}_{1}^{\mathrm{f}}$, 
\begin{equation*}
\left\vert e_{\Phi }(\hat{\rho}_{l})-\frac{\rho _{l}(U_{\Lambda _{l}}^{\Phi
})}{|\Lambda _{l}|}\right\vert =\mathcal{O}(l^{-1})
\end{equation*}%
with the energy density $e_{\Phi }(\rho )$ defined by Definition \ref%
{definition energy density}.
\end{lemma}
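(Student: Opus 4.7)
The plan is to show that $e_{\Phi}(\hat{\rho}_{l})$ and $\rho_{l}(U_{\Lambda_{l}}^{\Phi})/|\Lambda_{l}|$ differ by at most $\mathcal{O}(l^{-1})$ by identifying an intermediate quantity that equals both sides up to that error. The key observation is that the space--averaged state $\hat{\rho}_{l}$ is t.i., while the periodically extended Gibbs state $\rho_{l}$ is $(2l+1)\vec{1}$--periodic, so that both support well--defined energy densities that in fact coincide.

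First I would invoke Lemma~\ref{lemma energy density exists} twice: once for $\hat{\rho}_{l}\in E_{1}$ to get $e_{\Phi}(\hat{\rho}_{l})=\hat{\rho}_{l}(\mathfrak{e}_{\Phi})$, and once for $\rho_{l}\in E_{(2l+1)\vec{1}}$ to get $e_{\Phi}(\rho_{l})=\rho_{l}(\mathfrak{e}_{\Phi,(2l+1)\vec{1}})$. Using the explicit formula for $\mathfrak{e}_{\Phi,\vec{\ell}}$ after Definition~\ref{definition energy density} and the $(2l+1)$--periodicity of $\rho_{l}$ (which lets one replace the sum over $\{0,\ldots,2l\}^{d}$ by a sum over the translate $\Lambda_{l}$), one verifies the identity
\begin{equation*}
\rho_{l}(\mathfrak{e}_{\Phi,(2l+1)\vec{1}})=\frac{1}{|\Lambda_{l}|}\sum_{x\in\Lambda_{l}}\rho_{l}(\alpha_{x}(\mathfrak{e}_{\Phi}))=\hat{\rho}_{l}(\mathfrak{e}_{\Phi}),
\end{equation*}
hence $e_{\Phi}(\hat{\rho}_{l})=e_{\Phi}(\rho_{l})$. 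So it suffices to show that $|e_{\Phi}(\rho_{l})-\rho_{l}(U_{\Lambda_{l}}^{\Phi})/|\Lambda_{l}||=\mathcal{O}(l^{-1})$.

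For that, I would apply Lemma~\ref{lemma internal energy} to the large boxes $\Lambda_{l}^{(n)}=\bigcup_{x\in\Lambda_{n}}\{\Lambda_{l}+(2l+1)x\}$, which yields, uniformly in $n$,
\begin{equation*}
\left|\frac{\rho_{l}(U_{\Lambda_{l}^{(n)}}^{\Phi})}{|\Lambda_{l}^{(n)}|}-\frac{1}{|\Lambda_{l}^{(n)}|}\sum_{x\in\Lambda_{n}}\rho_{l}(U_{\Lambda_{l}+(2l+1)x}^{\Phi})\right|=\mathcal{O}(l^{-1}).
\end{equation*}
The $(2l+1)$--periodicity of $\rho_{l}$ together with translation invariance of $\Phi$ (giving $\alpha_{(2l+1)x}(U_{\Lambda_{l}}^{\Phi})=U_{\Lambda_{l}+(2l+1)x}^{\Phi}$) implies that each term $\rho_{l}(U_{\Lambda_{l}+(2l+1)x}^{\Phi})$ equals $\rho_{l}(U_{\Lambda_{l}}^{\Phi})$, so using $|\Lambda_{l}^{(n)}|=|\Lambda_{n}|\cdot|\Lambda_{l}|$ the above inequality reduces to $|\rho_{l}(U_{\Lambda_{l}^{(n)}}^{\Phi})/|\Lambda_{l}^{(n)}|-\rho_{l}(U_{\Lambda_{l}}^{\Phi})/|\Lambda_{l}||=\mathcal{O}(l^{-1})$ uniformly in $n$. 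Taking $n\rightarrow\infty$ on the left--hand side gives $e_{\Phi}(\rho_{l})$ (by Definition~\ref{definition energy density} applied along the cofinal sub--sequence $\{\Lambda_{l}^{(n)}\}_{n}$, which is a Van Hove net), and the lemma follows.

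The only mildly delicate point, rather than a serious obstacle, is to ensure that the error in Lemma~\ref{lemma internal energy} is indeed uniform in $n$, so that the bound survives the limit $n\rightarrow\infty$; its proof does give a bound of the form $|\partial\Lambda_{l}||\Lambda_{n}|/|\Lambda_{l}^{(n)}|\Vert\Phi\Vert_{\mathcal{W}_{1}}$, which depends only on $l$ and $\Phi$, so uniformity is granted. The rest of the argument is bookkeeping with $(2l+1)$--periodicity of $\rho_{l}$ and the two versions of the energy observable $\mathfrak{e}_{\Phi,\vec{\ell}}$.
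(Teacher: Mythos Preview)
Your proof is correct and follows essentially the same approach as the paper: both reduce the claim to the identity $e_{\Phi}(\hat{\rho}_{l})=e_{\Phi}(\rho_{l})$ together with the estimate $|e_{\Phi}(\rho_{l})-\rho_{l}(U_{\Lambda_{l}}^{\Phi})/|\Lambda_{l}||=\mathcal{O}(l^{-1})$, the latter obtained from Lemma~\ref{lemma internal energy} and the $(2l+1)$--periodicity of $\rho_{l}$ in exactly the way you describe. The only cosmetic difference is that the paper deduces $e_{\Phi}(\hat{\rho}_{l})=e_{\Phi}(\rho_{l})$ directly from the affinity and translation invariance of $e_{\Phi}$ (Lemma~\ref{Th.en.func}~(i)), whereas you unfold this via the explicit energy observables $\mathfrak{e}_{\Phi,\vec{\ell}}$; these are the same argument viewed at two levels of detail.
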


\begin{proof}
By $(2l+1)\mathbb{Z}^{d}$--invariance of Gibbs equilibrium states $\rho _{l}$%
, it follows that%
\begin{equation*}
\sum\limits_{x\in \Lambda _{n}}\rho _{l}(U_{\Lambda _{l}+(2l+1)x}^{\Phi
})=|\Lambda _{n}|\rho _{l}(U_{\Lambda _{l}}^{\Phi }).
\end{equation*}%
Consequently, by using Lemma \ref{lemma internal energy} and the limit $%
n\rightarrow \infty $, one obtains that%
\begin{equation}
\left\vert e_{\Phi }(\rho _{l})-\frac{\rho _{l}(U_{\Lambda _{l}}^{\Phi })}{%
|\Lambda _{l}|}\right\vert =\mathcal{O}(l^{-1}).  \label{eq sup mean enerfy}
\end{equation}%
The functional $\rho \mapsto e_{\Phi }(\rho )$ is affine and t.i., see Lemma %
\ref{Th.en.func} (i). Therefore $e_{\Phi }(\hat{\rho}_{l})=e_{\Phi }(\rho
_{l})$ which combined with (\ref{eq sup mean enerfy}) implies the lemma.%
\end{proof}%

The next step to find the upper bound on the $\limsup $ of the pressure $%
p_{l}$ is now to study the first term in the r.h.s of (\ref{BCS equation 4})
because the others terms can be controlled by using (\ref{mean entropy per
volume}) and Lemma \ref{lemma mean energy trivial-1}.\ The relationship of
this term with $\Delta _{\mathfrak{e}_{\Phi }+i\mathfrak{e}_{\Phi ^{\prime
}}}(\hat{\rho}_{l})$ at large $l$ is problematic (recall that $\mathfrak{e}%
_{\Phi }:=\mathfrak{e}_{\Phi ,(1,\cdots ,1)}$ and $\Delta _{A}$ are
respectively defined by (\ref{eq:enpersite}) and Definition \ref{definition
de deltabis}): On the one hand, we cannot expect the limit 
\begin{equation*}
\underset{l\rightarrow \infty }{\lim }\left( \frac{1}{|\Lambda _{l}|^{2}}%
\rho _{l}((U_{\Lambda _{l}}^{\Phi }+iU_{\Lambda _{l}}^{\Phi ^{\prime
}})^{\ast }(U_{\Lambda _{l}}^{\Phi }+iU_{\Lambda _{l}}^{\Phi ^{\prime
}}))-|\rho _{l}(\mathfrak{e}_{\Phi }+i\mathfrak{e}_{\Phi ^{\prime
}})|^{2}\right) =0
\end{equation*}%
to hold in general. Otherwise it would follow -- at least w.r.t. the
observables $\mathfrak{e}_{\Phi }$ and $\mathfrak{e}_{\Phi ^{\prime }}$ --
the absence of long--range order (LRO). On the other hand, we know -- as $%
\hat{\rho}_{l}$ are ergodic states -- that:%
\begin{equation*}
\Delta _{\mathfrak{e}_{\Phi }+i\mathfrak{e}_{\Phi ^{\prime }}}(\hat{\rho}%
_{l})=\left\vert \hat{\rho}_{l}\left( \mathfrak{e}_{\Phi }+i\mathfrak{e}%
_{\Phi ^{\prime }}\right) \right\vert ^{2}.
\end{equation*}

In the case of purely repulsive long--range coupling constants where $\Phi
_{a,-}=\Phi _{a,-}^{\prime }=0$ (a.e.) (cf. Definition \ref{long range
attraction-repulsion}), the arguments become easier because from the GNS
representation of $\rho _{l}$ combined with (\ref{eq sup mean enerfybis})
for $\rho =\rho _{l}$ we obtain that, for any $\mathfrak{m}\in \mathcal{M}%
_{1}^{\mathrm{df}}$, 
\begin{equation*}
\liminf\limits_{l\rightarrow \infty }\left\{ \frac{1}{|\Lambda _{l}|^{2}}%
\rho _{l}\left( (U_{\Lambda _{l}}^{\Phi }+iU_{\Lambda _{l}}^{\Phi ^{\prime
}})^{\ast }(U_{\Lambda _{l}}^{\Phi }+iU_{\Lambda _{l}}^{\Phi ^{\prime
}})\right) -\Delta _{\mathfrak{e}_{\Phi }+i\mathfrak{e}_{\Phi ^{\prime
}}}\left( \hat{\rho}_{l}\right) \right\} \geq 0.
\end{equation*}%
This last limit combined with (\ref{BCS equation 4}), (\ref{mean entropy per
volume}), and Lemma \ref{lemma mean energy trivial-1}, yields the desired
upper bound when $\Phi _{a,-}=0$ (a.e.), i.e., for purely repulsive
long--range models.

However, as soon as we have long--range attractions $\Phi _{a,-},\Phi
_{a,-}^{\prime }\neq 0$ (a.e.), the proof of the upper bound on the pressure
requires Corollary \ref{pression_inv_perm} as a key ingredient to obtain a
more convenient sequence of test states $\hat{\varrho}_{l}\in \mathcal{E}%
_{1} $. ($\rho _{l}$, $\hat{\rho}_{l}$, and $\hat{\varrho}_{l}$ have not
necessarily the same weak$^{\ast }$--accumulation points.) In fact, similar
arguments was first used in \cite{Petz2008} and subsequently in \cite%
{monsieurremark} for translation invariant quantum spin systems (Remark \ref%
{Quantum spin systems}). Following their strategy \cite%
{Petz2008,monsieurremark} combined with Corollary \ref{pression_inv_perm},
we obtain the desired upper bound for any $\mathfrak{m}\in \mathcal{M}_{1}^{%
\mathrm{df}}$:

\begin{lemma}[Thermodynamic limit of the pressure ${p}_{l}$ -- upper bound]
\label{Lemma lower bound pressure}\mbox{ }\newline
For any $\mathfrak{m}\in \mathcal{M}_{1}^{\mathrm{df}}$, there is a sequence 
$\{\hat{\varrho}_{l}\}_{l\in \mathbb{N}}\subseteq \mathcal{E}_{1}$ of
ergodic states such that%
\begin{equation*}
\limsup\limits_{l\rightarrow \infty }p_{l,\mathfrak{m}}=-\lim\limits_{l%
\rightarrow \infty }g_{\mathfrak{m}}\left( \hat{\varrho}_{l}\right)
=-\lim\limits_{l\rightarrow \infty }f_{\mathfrak{m}}^{\sharp }(\hat{\varrho}%
_{l})\leq -\inf\limits_{\rho \in E_{1}}\,f_{\mathfrak{m}}^{\sharp }(\rho )
\end{equation*}%
with the functional $g_{\mathfrak{m}}$ defined by Definition \ref{Reduced
free energy}.
\end{lemma}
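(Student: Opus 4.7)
The plan is to construct, along a subsequence $(l_n)$ realising $\limsup_l p_{l,\mathfrak{m}}$, ergodic translation--invariant test states $\hat{\varrho}_{l_n} \in \mathcal{E}_1$ whose free--energy $f_{\mathfrak{m}}^{\sharp}(\hat{\varrho}_{l_n})$ agrees with $-p_{l_n,\mathfrak{m}}$ up to $o(1)$. Since $\hat{\varrho}_{l_n} \in \mathcal{E}_1$, the identity (\ref{equality idiote}) gives $f_{\mathfrak{m}}^{\sharp} = g_{\mathfrak{m}}$ on these states, and $f_{\mathfrak{m}}^{\sharp}(\hat{\varrho}_{l_n}) \geq \inf_{E_1} f_{\mathfrak{m}}^{\sharp}$ yields the final inequality automatically.

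My approach is to reinterpret $\mathfrak{m}$ restricted to the large box $\Lambda_l^{(n)} = \bigcup_{x \in \Lambda_n}(\Lambda_l + (2l+1)x)$ as a \emph{permutation--invariant} long--range model $\tilde{\mathfrak{m}}_l$ on the super--lattice $\mathbb{Z}^d$ whose one--super--site algebra is $\mathcal{U}_{\Lambda_l}$, with one--super--site local energy observable $U^{\Phi}_{\Lambda_l}$ and one--super--site long--range observables $U^{\Phi_k}_{\Lambda_l}+iU^{\Phi_k'}_{\Lambda_l}$. By Lemma \ref{lemma internal energy} for the local part and the analogous estimate obtained from (\ref{eq sup mean enerfybis0}) for each of the finitely many long--range summands (available since $\mathfrak{m} \in \mathcal{M}_1^{\mathrm{df}}$), the internal energy of $\mathfrak{m}$ on $\Lambda_l^{(n)}$ differs from that of $\tilde{\mathfrak{m}}_l$ on $|\Lambda_n|$ super--sites by $o(|\Lambda_l^{(n)}|)$ as first $n \to \infty$ and then $l \to \infty$. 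Combined with (\ref{petite inequality2}) this identifies $\lim_{n \to \infty}p_{\Lambda_l^{(n)},\mathfrak{m}}$ with the super--lattice pressure of $\tilde{\mathfrak{m}}_l$ divided by $|\Lambda_l|$, and Corollary \ref{pression_inv_perm} rewrites the latter as
\[
-\frac{1}{|\Lambda_l|}\inf_{\tau \in E_{\Lambda_l}}\bigg\{\tau(U^{\Phi}_{\Lambda_l}) + \sum_k \frac{\gamma_k\,|\tau(U^{\Phi_k}_{\Lambda_l}+iU^{\Phi_k'}_{\Lambda_l})|^2}{|\Lambda_l|} - \beta^{-1}S(\tau)\bigg\}.
\]
Using Lemma \ref{lemma mean energy trivial-1} to identify $\lim_n p_{\Lambda_l^{(n)},\mathfrak{m}}$ with $p_{l,\mathfrak{m}} + o_l(1)$ gives $p_{l,\mathfrak{m}}$ as minus the above variational problem up to $o_l(1)$.

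From a minimiser $\tau_l$ of the variational problem, define $\hat{\varrho}_l := |\Lambda_l|^{-1}\sum_{y \in \Lambda_l}\tilde{\varrho}_l \circ \alpha_y$, where $\tilde{\varrho}_l$ is the $(2l+1)\mathbb{Z}^d$--periodic product extension of $\tau_l$ to $\mathcal{U}$; this extension is well defined because $\tau_l$, being a Gibbs--type minimiser, is even (cf.\ \cite[Theorem 11.2]{Araki-Moriya}). The computation leading to (\ref{asymptotics1})--(\ref{asymptotics2}) in the proof of Corollary \ref{lemma density of extremal points} applies verbatim and shows $\hat{\varrho}_l \in \mathcal{E}_1$. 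Ergodicity, Lemma \ref{lemma mean energy trivial-1}, and the entropy identity (\ref{mean entropy per volume}) give
\[
\hat{\varrho}_l(\mathfrak{e}_{\Phi_k}+i\mathfrak{e}_{\Phi_k'}) = \tfrac{\tau_l(U^{\Phi_k}_{\Lambda_l}+iU^{\Phi_k'}_{\Lambda_l})}{|\Lambda_l|},\quad e_{\Phi}(\hat{\varrho}_l) = \tfrac{\tau_l(U^{\Phi}_{\Lambda_l})}{|\Lambda_l|} + o_l(1),\quad s(\hat{\varrho}_l) = \tfrac{S(\tau_l)}{|\Lambda_l|},
\]
so that $g_{\mathfrak{m}}(\hat{\varrho}_l)$ matches the bracket evaluated at $\tau_l$, i.e.\ equals $-p_{l,\mathfrak{m}} + o_l(1)$, as required.

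The main obstacle is the identification of $p_{l,\mathfrak{m}}$ with the super--lattice permutation--invariant pressure in the \emph{attractive} case $\gamma_k = -1$. For repulsive terms, Cauchy--Schwarz combined with passivity (Theorem \ref{passivity.Gibbs}) bounds the difference via Gibbs' variational principle, as in Lemma \ref{Lemma lower bound pressure copy(1)}. For attractive terms both inequalities go the wrong way; the resolution comes from the permutation--invariant setup itself, where St\o rmer's Theorem \ref{Stoermer CAR} forces super--site product--state structure in the super--thermodynamic limit, so the diagonal contribution $\tau(U^{\Phi_k*}U^{\Phi_k})$ (the source of the Cauchy--Schwarz gap) is of relative order $1/|\Lambda_n|$ and vanishes, leaving only the $|\tau(U^{\Phi_k})|^2$ term common to both signs of $\gamma_k$.
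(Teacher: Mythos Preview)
Your overall strategy coincides with the paper's: reinterpret the model on the large box $\Lambda_l^{(n)}$ as a permutation-invariant model on a super-lattice with one-site algebra $\mathcal{U}_{\Lambda_l}$, invoke Corollary~\ref{pression_inv_perm} to express the $n\to\infty$ limit of its pressure as a one-super-site variational problem, take a minimizer $\varrho_l$, and build the ergodic $\hat{\varrho}_l$ from its periodic extension. The paper does exactly this, and your identification of $e_\Phi(\hat{\varrho}_l)$, $s(\hat{\varrho}_l)$ and the long-range expectations with the corresponding quantities for $\varrho_l$ (up to $o_l(1)$ boundary corrections) mirrors the paper's use of Lemma~\ref{lemma mean energy trivial-1} and (\ref{mean entropy per volume}).

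The gap is the sentence ``Using Lemma~\ref{lemma mean energy trivial-1} to identify $\lim_n p_{\Lambda_l^{(n)},\mathfrak{m}}$ with $p_{l,\mathfrak{m}}+o_l(1)$''. Lemma~\ref{lemma mean energy trivial-1} compares energy \emph{densities}, not pressures, and in any case there is no reason for the infinite-volume subsequential limit $\lim_n p_{\Lambda_l^{(n)},\mathfrak{m}}$ to equal the single-box pressure $p_{l,\mathfrak{m}}$ up to $o_l(1)$: that would amount to presupposing convergence of $p_L$, which is precisely what the lemma is meant to help establish. The paper does \emph{not} attempt this identification. It instead records (\ref{important equation}), namely $\lim_{l}\limsup_n |p_{l,\mathfrak{m}}(n,\beta)-p_{2ln+n+l,\mathfrak{m}}|=0$, directly from Lemma~\ref{lemma internal energy} and (\ref{petite inequality2}), and combines it with (\ref{important limit}) to pass from the double limit in $(l,n)$ to $\limsup_L p_L$. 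Your argument already contains (\ref{important equation}) implicitly (your second paragraph), so the repair is to drop the claimed identity with $p_{l,\mathfrak{m}}$ and argue with the diagonal family $L=2ln+n+l$ as the paper does.

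Two smaller points. First, the minimizer $\varrho_l$ is not a Gibbs state (the functional is non-linear), so ``Gibbs-type minimiser, hence even'' is not a justification; rather, replacing $\varrho_l$ by its even part $\tfrac12(\varrho_l+\varrho_l\circ\sigma_\pi)$ leaves the energy terms unchanged (all $U_\Lambda^\Phi$ are even) and can only increase the von Neumann entropy by concavity, so one may take the minimizer even. Second, your closing paragraph about the ``main obstacle'' being the attractive sign misidentifies the difficulty: once one invokes Corollary~\ref{pression_inv_perm}, the long-range term already appears as $|\tau(\cdot)|^2$ for \emph{both} signs of $\gamma_k$ --- this is exactly what St{\o}rmer's theorem, as packaged in that corollary, delivers --- so no separate Cauchy--Schwarz control is needed at that stage. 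The genuine issue is the bridging step above, not the sign of $\gamma_k$.
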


\begin{proof}
For any $l\in \mathbb{N}$, $\Phi \in \mathcal{W}_{1}$ and $n\in \mathbb{N}%
_{0}$, define the self-adjoint elements 
\begin{equation*}
U_{l,n}^{\Phi }:=\sum\limits_{x\in \Lambda _{n}}\alpha _{(2l+1)x}(U_{\Lambda
_{l}}^{\Phi }).
\end{equation*}%
Then, for any $l,n\in \mathbb{N}$ and any discrete finite range model%
\begin{equation*}
\mathfrak{m}:=\{\Phi \}\cup \{\Phi _{k},\Phi _{k}^{\prime }\}_{k=1}^{N}\in 
\mathcal{M}_{1}^{\mathrm{df}},
\end{equation*}%
we define the internal energy $U_{l,n}$ by%
\begin{equation*}
U_{l,n}:=U_{l,n}^{\Phi }+\sum\limits_{k=1}^{N}\frac{\gamma _{k}}{|\Lambda
_{l}^{(n)}|}(U_{l,n}^{\Phi _{k}}+iU_{l,n}^{\Phi _{k}^{\prime }})^{\ast
}(U_{l,n}^{\Phi _{k}}+iU_{l,n}^{\Phi _{k}^{\prime }})
\end{equation*}%
with $\Lambda _{l}^{(n)}$ defined by (\ref{lambda_n_l}). The pressure
associated with $U_{l,n}$ is as usual defined, for $\beta \in (0,\infty )$,
by%
\begin{equation*}
p_{l,\mathfrak{m}}\left( n,\beta \right) :=\frac{1}{\beta |\Lambda
_{l}^{(n)}|}\ln \mathrm{Trace}_{\wedge \mathcal{H}_{\Lambda }}(\mathrm{e}%
^{-\beta U_{l,n}}).
\end{equation*}

Now, by using Lemma \ref{lemma internal energy} together with (\ref{petite
inequality2}), observe that%
\begin{equation}
\lim\limits_{l\rightarrow \infty }\left\{ \limsup\limits_{n\rightarrow
\infty }\left\vert p_{l,\mathfrak{m}}\left( n,\beta \right) -p_{2ln+n+l,%
\mathfrak{m}}\right\vert \right\} =0  \label{important equation}
\end{equation}%
for any $\mathfrak{m}\in \mathcal{M}_{1}^{\mathrm{df}}$. The pressure $p_{l,%
\mathfrak{m}}\left( n,\beta \right) $ can be seen as a finite-volume
pressure of a \emph{permutation invariant} model $\mathfrak{m}_{l}$ defined
as follows. Recall that the $C^{\ast }$--algebra $\mathcal{U}$ is the
fermion algebra defined in\ Section \ref{Section fermions algebra} with a
spin set $\mathrm{S}$. Then the space $\mathcal{M}_{1}=\mathcal{M}_{1}(%
\mathcal{U})$ defined by Definition \ref{definition M1bis} is the Banach
space of long--range models constructed from $\mathcal{U}$. Now, for each $%
l\in \mathbb{N}$, we define the $C^{\ast }$--algebra $\mathcal{U}_{l}$ to be
the fermion algebra with spin set $\mathrm{S}\times \Lambda _{l}$, and in
the same way $\mathcal{M}_{1}$ is defined, we construct from $\mathcal{U}%
_{l} $ the Banach space $\mathcal{M}_{1}(\mathcal{U}_{l})$ of long--range
models. For any $x\in \Lambda _{n}$, note that the sub--algebra $(\mathcal{U}%
_{l})_{\{x\}}$ of $\mathcal{U}_{l}$ can be canonically identified with the
sub--algebra $\mathcal{U}_{\Lambda _{l}+(2l+1)x}$ of $\mathcal{U}$. At $l\in 
\mathbb{N}$ and for any $\mathfrak{m}\in \mathcal{M}_{1}^{\mathrm{df}}$, the
permutation invariant discrete long--range model $\mathfrak{m}_{l}$ is the
element 
\begin{equation*}
\mathfrak{m}_{l}:=\{\Phi ^{(l)}\}\cup \{\Phi _{k}^{(l)},(\Phi
^{(l)})_{k}^{\prime }\}_{k=1}^{N}\in \mathcal{M}_{1}(\mathcal{U}_{l})
\end{equation*}%
uniquely defined by the conditions 
\begin{equation*}
\Phi _{\{0\}}^{(l)}:=|\Lambda _{l}|^{-1}U_{\Lambda _{l}}^{\Phi },\quad (\Phi
_{k}^{(l)})_{\{0\}}:=|\Lambda _{l}|^{-1}U_{\Lambda _{l}}^{\Phi _{k}},\quad
((\Phi ^{(l)})_{k}^{\prime })_{\{0\}}:=|\Lambda _{l}|^{-1}U_{\Lambda
_{l}}^{\Phi _{k}^{\prime }}
\end{equation*}%
with $\Phi _{\Lambda }^{(l)}=(\Phi _{k}^{(l)})_{\Lambda }=((\Phi
^{(l)})_{k}^{\prime })_{\Lambda }=0$ whenever $|\Lambda |\not=1$.

By using these definitions, we have 
\begin{equation*}
p_{l,\mathfrak{m}}\left( n,\beta \right) =p_{n,\mathfrak{m}_{l}}\left(
0,\beta _{l}\right)
\end{equation*}%
with $\beta _{l}:=|\Lambda _{l}|\beta $. Therefore, we are in position to
use Corollary \ref{pression_inv_perm} in order to compute the thermodynamic
limit $n\rightarrow \infty $ of the permutation invariant discrete model $%
\mathfrak{m}_{l}\in \mathcal{M}_{1}(\mathcal{U}_{l})$:%
\begin{eqnarray*}
\underset{n\rightarrow \infty }{\lim }p_{l,\mathfrak{m}}\left( n,\beta
\right) &=&\underset{n\rightarrow \infty }{\lim }p_{n,\mathfrak{m}%
_{l}}\left( 0,\beta _{l}\right) =-\inf_{\rho _{\Lambda _{l}}\in E_{\Lambda
_{l}}}\left\{ \sum\limits_{k=1}^{N}\gamma _{k}|\Lambda _{l}|^{-2}|\rho
_{\Lambda _{l}}(U_{\Lambda _{l}}^{\Phi _{k}}+iU_{\Lambda _{l}}^{\Phi
_{k}^{\prime }})|^{2}\right. \\
&&\left. \overset{}{\underset{}{}}+|\Lambda _{l}|^{-1}\rho _{\Lambda
_{l}}(U_{\Lambda _{l}}^{\Phi })-(\beta |\Lambda _{l}|)^{-1}S(\rho _{\Lambda
_{l}})\right\}
\end{eqnarray*}%
with the weak$^{\ast }$--continuous functional $S$ being the von Neumann
entropy defined by (\ref{neuman entropy}). This variational problem is a
minimization of a weak$^{\ast }$--continuous functional over the set $%
E_{\Lambda _{l}}$ of all (local) states on the finite dimensional algebra $%
\mathcal{U}_{\Lambda _{l}}$. Therefore, for each $l\in \mathbb{N}$, it has a
minimizer $\varrho _{l}\in E_{\Lambda _{l}}$ which can also be seen as a
state on the whole algebra $\mathcal{U}$ by periodically extending it (with
period $(2l+1)$ in each direction of the lattice $\mathfrak{L}$). We define
from $\varrho _{l}\in E$ the t.i. space--averaged state 
\begin{equation*}
\hat{\varrho}_{l}:=\frac{1}{|\Lambda _{l}|}\sum\limits_{x\in \Lambda
_{l}}\varrho _{l}\circ \alpha _{x}\in \mathcal{E}_{1}
\end{equation*}%
(compare this definition with (\ref{equation toto}) for $\vec{\ell}%
=(1,\cdots ,1)$). Recall that $\hat{\varrho}_{l}$ is ergodic (and thus
extremal), as shown in the proof of Lemma \ref{lemma density of extremal
points}. Then, by using $\Delta _{A}\left( \hat{\varrho}_{l}\right) =|\hat{%
\varrho}_{l}(A)|^{2}$ (see Theorem \ref{Lemma1.vonN} (iv)), Equality (\ref%
{mean entropy per volume}) and Lemma \ref{lemma mean energy trivial-1}
applied to states $\varrho _{l}\in E$ and $\hat{\varrho}_{l}\in \mathcal{E}%
_{1}$, we obtain that 
\begin{equation}
\lim\limits_{l\rightarrow \infty }\underset{n\rightarrow \infty }{\lim }p_{l,%
\mathfrak{m}}\left( n,\beta \right) =-\lim\limits_{l\rightarrow \infty }g_{%
\mathfrak{m}}\left( \hat{\varrho}_{l}\right) =-\lim\limits_{l\rightarrow
\infty }f_{\mathfrak{m}}^{\sharp }(\hat{\varrho}_{l}),
\label{important limit}
\end{equation}%
see also Lemma \ref{lemma property free--energy density functional} (ii).
Therefore, the limits (\ref{important equation}) and (\ref{important limit})
yield the lemma. 
\end{proof}%

Consequently, Theorem \ref{BCS main theorem 1} is a direct consequence of
Lemmata \ref{lemma property free--energy density functional copy(1)}, \ref%
{lemma reduction df}, \ref{Lemma lower bound pressure copy(1)}, and \ref%
{Lemma lower bound pressure} together with Corollary \ref{lemma reduction
dfbisbis}. In fact, we obtain a bit more than Theorem \ref{BCS main theorem
1}. Indeed, by combining Theorem \ref{BCS main theorem 1} with Theorem \ref%
{passivity.Gibbs}, (\ref{eq sup mean enerfybis}) and the fact that the
space--average $\widehat{\mathfrak{e}}_{\Phi ,l}$ (\ref{perm inv 1bis}) is
uniformly bounded by $\Vert \Phi \Vert _{\mathcal{W}_{1}}$ for $l\in \mathbb{%
N}$, we show that the map%
\index{Free--energy density functional!long--range!extension|textbf} 
\begin{eqnarray}
\rho \mapsto \mathfrak{F}_{\mathfrak{m}}^{\sharp }\left( \rho \right) : &=&%
\underset{l\rightarrow \infty }{\limsup }\left\{ \int_{\mathcal{A}}\gamma
_{a}\rho ((%
\widehat{\mathfrak{e}}_{\Phi _{a},l}+i\widehat{\mathfrak{e}}_{\Phi
_{a}^{\prime },l})^{\ast }(\widehat{\mathfrak{e}}_{\Phi _{a},l}+i\widehat{%
\mathfrak{e}}_{\Phi _{a}^{\prime },l}))\mathrm{d}\mathfrak{a}\left( a\right)
\right.  \notag \\
&&\left. +\frac{1}{|\Lambda _{l}|}\rho \left( U_{\Lambda _{l}}^{\Phi
}\right) -\frac{1}{\beta |\Lambda _{l}|}S(\rho _{\Lambda _{l}})\right\}
\label{extension of fdiese}
\end{eqnarray}%
from $E$ to $\mathbb{R}$ makes sense, as the quantity in the $\limsup $
above is uniformly bounded in $l\in \mathbb{N}$. Furthermore, for any $\rho
\in E_{\vec{\ell}}$, $\mathfrak{F}_{\mathfrak{m}}^{\sharp }\left( \rho
\right) =f_{\mathfrak{m}}^{\sharp }\left( \rho \right) $ because the $%
\limsup $ in the definition of $\mathfrak{F}_{\mathfrak{m}}^{\sharp }$ above
can be changed into a $\lim $ on the set $E_{\vec{\ell}}$ of $\mathbb{Z}_{%
\vec{\ell}}^{d}$--invariant states for any $\vec{\ell}\in \mathbb{N}^{d}$.
See also Corollary \ref{corollary property free--energy density functional}
(i). By deriving upper and lower bounds for the pressure w.r.t. $\mathfrak{F}%
_{\mathfrak{m}}^{\sharp }\left( \rho \right) $, exactly in the same way we
did for $f_{\mathfrak{m}}^{\sharp }\left( \rho \right) $, we get the
following theorem:

\begin{theorem}[Pressure $\mathrm{P}_{\mathfrak{m}}^{\sharp }$ as
variational problems on states]
\label{BCS main theorem 1 copy(1)}\mbox{ }\newline
\emph{(i)} 
\index{Pressure!variational problems}For $%
\vec{\ell}\in \mathbb{N}^{d}$ and any $\mathfrak{m}\in \mathcal{M}_{1}$,%
\begin{equation*}
\mathrm{P}_{\mathfrak{m}}^{\sharp }:=\underset{l\rightarrow \infty }{\lim }%
\left\{ p_{l}\right\} =-\inf\limits_{\rho \in E}\,\mathfrak{F}_{\mathfrak{m}%
}^{\sharp }\left( \rho \right) =-\inf\limits_{\rho \in E_{\vec{\ell}}}\,f_{%
\mathfrak{m}}^{\sharp }(\rho )=-\inf\limits_{\rho \in E_{1}}\,f_{\mathfrak{m}%
}^{\sharp }(\rho )<\infty .
\end{equation*}%
\emph{(ii)} The map $\mathfrak{m}\mapsto \mathrm{P}_{\mathfrak{m}}^{\sharp }$
from $\mathcal{M}_{1}$ to $\mathbb{R}$ is locally Lipschitz continuous.
\end{theorem}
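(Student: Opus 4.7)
The plan is to extend Theorem \ref{BCS main theorem 1} by showing the two new equalities
$$\mathrm{P}_{\mathfrak{m}}^{\sharp} = -\inf_{\rho \in E}\mathfrak{F}_{\mathfrak{m}}^{\sharp}(\rho) = -\inf_{\rho \in E_{\vec{\ell}}} f_{\mathfrak{m}}^{\sharp}(\rho),$$
leaving assertion (ii) to Lemma \ref{lemma reduction df}. As a preliminary, I would use Corollary \ref{lemma reduction dfbisbis} to reduce everything to discrete finite range models $\mathfrak{m}\in\mathcal{M}_1^{\mathrm{df}}$, and I would verify that $\mathfrak{F}_{\mathfrak{m}}^{\sharp}$ is well-defined on all of $E$: the long-range contribution is bounded by $\Vert\widehat{\mathfrak{e}}_{\Phi_a,l}\Vert^2+\Vert\widehat{\mathfrak{e}}_{\Phi_a',l}\Vert^2 \le 2(\Vert\Phi_a\Vert_{\mathcal{W}_1}^2+\Vert\Phi_a'\Vert_{\mathcal{W}_1}^2)$ uniformly in $l$ and $\rho$, while the remaining terms are uniformly bounded by combining (\ref{eq:enpersite bounded}), Remark \ref{remark boundedness entropy} (after passing to a $\vec{\ell}$-periodic extension if needed), and Lebesgue's dominated convergence to move the $\limsup$ inside the integral over $\mathcal{A}$.

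The lower bound $-\mathrm{P}_{\mathfrak{m}}^{\sharp}\le \inf_{\rho\in E}\mathfrak{F}_{\mathfrak{m}}^{\sharp}(\rho)$ would come from the passivity of Gibbs states (Theorem \ref{passivity.Gibbs}), which yields (\ref{BCS equation 5}) for \emph{every} $\rho\in E$ (not only $\rho\in E_1$). Combining this with the fact that (\ref{eq sup mean enerfybis}) holds \emph{uniformly in $\rho\in E$} (a point already observed after Lemma \ref{Lemma lower bound pressure copy(1)}), one gets
$$p_l\ge -\int_{\mathcal{A}}\gamma_a\rho((\widehat{\mathfrak{e}}_{\Phi_a,l}+i\widehat{\mathfrak{e}}_{\Phi_a',l})^*(\widehat{\mathfrak{e}}_{\Phi_a,l}+i\widehat{\mathfrak{e}}_{\Phi_a',l}))\mathrm{d}\mathfrak{a}(a)-\frac{\rho(U_{\Lambda_l}^{\Phi})}{|\Lambda_l|}+\frac{S(\rho_{\Lambda_l})}{\beta|\Lambda_l|}+o(1),$$
so taking the $\liminf$ as $l\to\infty$ and the $\inf$ over $\rho\in E$ gives the desired bound.

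For the upper bound, I would exploit the inclusions $E_1\subseteq E_{\vec{\ell}}\subseteq E$ together with the identification $\mathfrak{F}_{\mathfrak{m}}^{\sharp}\big|_{E_{\vec{\ell}}}=f_{\mathfrak{m}}^{\sharp}\big|_{E_{\vec{\ell}}}$. The latter identification is the technical heart of the extension: on each $\rho\in E_{\vec{\ell}}$, Lemma \ref{Lemma1.vonN copy(2)} applied to $A=\mathfrak{e}_{\Phi_a}+i\mathfrak{e}_{\Phi_a'}$ converts the $\limsup$ of the integrand in $\mathfrak{F}_{\mathfrak{m}}^{\sharp}$ into the genuine limit $\Delta_{\mathfrak{e}_{\Phi_a}+i\mathfrak{e}_{\Phi_a'}}(\rho)$, and an application of Lebesgue's dominated convergence together with the uniform bound from (\ref{eq:enpersite bounded}) permits pushing this limit under the $a$-integral; simultaneously, the local part $|\Lambda_l|^{-1}\rho(U_{\Lambda_l}^{\Phi})\to e_{\Phi}(\rho)$ by Definition \ref{definition energy density} and $|\Lambda_l|^{-1}S(\rho_{\Lambda_l})\to s(\rho)$ by Lemma \ref{lemma property entropybis}. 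This gives
$$\inf_{\rho\in E}\mathfrak{F}_{\mathfrak{m}}^{\sharp}(\rho)\le \inf_{\rho\in E_{\vec{\ell}}}\mathfrak{F}_{\mathfrak{m}}^{\sharp}(\rho)=\inf_{\rho\in E_{\vec{\ell}}}f_{\mathfrak{m}}^{\sharp}(\rho)\le \inf_{\rho\in E_1}f_{\mathfrak{m}}^{\sharp}(\rho)=-\mathrm{P}_{\mathfrak{m}}^{\sharp},$$
where the very last equality invokes Theorem \ref{BCS main theorem 1}\,(i). Combined with the lower bound, all three infima coincide with $-\mathrm{P}_{\mathfrak{m}}^{\sharp}$, and finiteness of $\mathrm{P}_{\mathfrak{m}}^{\sharp}$ follows from Theorem \ref{BCS main theorem 1}\,(i). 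I expect the main obstacle to be a bookkeeping one — namely checking that the $\limsup$--to--$\lim$ reduction on $E_{\vec{\ell}}$ is genuinely uniform enough to commute with the integral over $\mathcal{A}$ — but once one works in $\mathcal{M}_1^{\mathrm{df}}$ and passes to the general case via Corollary \ref{lemma reduction dfbisbis}, the argument is a routine extension of the proofs of Lemmata \ref{Lemma lower bound pressure copy(1)} and \ref{Lemma lower bound pressure}.
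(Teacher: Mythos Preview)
Your proposal is correct and follows essentially the same route as the paper: the paper states that $\mathfrak{F}_{\mathfrak{m}}^{\sharp}$ is well--defined on $E$ by the uniform bound $\Vert\widehat{\mathfrak{e}}_{\Phi,l}\Vert\le\Vert\Phi\Vert_{\mathcal{W}_1}$, observes that $\mathfrak{F}_{\mathfrak{m}}^{\sharp}=f_{\mathfrak{m}}^{\sharp}$ on each $E_{\vec{\ell}}$ because the $\limsup$ becomes a $\lim$ there, and then derives the upper and lower bounds on the pressure ``exactly in the same way'' as for $f_{\mathfrak{m}}^{\sharp}$ (i.e., via Theorem~\ref{passivity.Gibbs} and (\ref{eq sup mean enerfybis}) for the lower bound, and the inclusions $E_1\subseteq E_{\vec{\ell}}\subseteq E$ together with Theorem~\ref{BCS main theorem 1}\,(i) for the upper bound). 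Your identification of Lemma~\ref{lemma reduction df} for part~(ii) and of Corollary~\ref{lemma reduction dfbisbis} for the reduction to $\mathcal{M}_1^{\mathrm{df}}$ also matches the paper.
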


The two infima, respectively over the set $E$ and $E_{\vec{\ell}}$ of
Theorem \ref{BCS main theorem 1 copy(1)} (i), are not really used in the
sequel as we concentrate our attention on t.i. states. These results are
only discussed in Section \ref{Section Gibbs versus gen eq states}.

\begin{remark}[Convexity of the functional $\mathfrak{F}_{\mathfrak{m}%
}^{\sharp }$]
\mbox{ }\newline
As $\mathfrak{F}_{\mathfrak{m}}^{\sharp }\left( \rho \right) $ is defined by
a $\limsup $, by using the property \textbf{S4} of the von Neumann entropy,
it is easy to check that the map $\rho \mapsto \mathfrak{F}_{\mathfrak{m}%
}^{\sharp }\left( \rho \right) $ from $E$ to $\mathbb{R}$ is a convex
functional.
\end{remark}

\chapter{Purely Attractive Long--Range Fermi Systems%
\index{Long--range models!purely attractive}}

\setcounter{equation}{0}%
Recall that generalized t.i. equilibrium states are defined to be weak$%
^{\ast }$--limit points of approximating minimizers of the free--energy
density functional $f_{\mathfrak{m}}^{\sharp }$, see Definition \ref%
{definition equilibirum state}. It is, a priori, not clear that the first
variational problem%
\begin{equation*}
\mathrm{P}_{\mathfrak{m}}^{\sharp }=-\inf\limits_{\rho \in E_{1}}\,f_{%
\mathfrak{m}}^{\sharp }(\rho )
\end{equation*}%
given in Theorem \ref{BCS main theorem 1} (i) has any minimizer. The problem
comes from the fact that $f_{\mathfrak{m}}^{\sharp }$ is generally not weak$%
^{\ast }$--lower semi--continuous because of the long--range repulsions, see
discussions after Lemma \ref{lemma property free--energy density functional}%
. As a consequence, models without long--range repulsions (Definition \ref%
{long range attraction-repulsion} ($+$)), i.e., with $\Phi _{a,+}=\Phi
_{a,+}^{\prime }=0$ (a.e.), are the easiest case to handle. This specific
case is analyzed in this chapter also because it is necessary to understand
the variational problem $\mathrm{F}_{\mathfrak{m}}^{\flat }$ of the
thermodynamic game defined in Definition \ref{definition two--person
zero--sum game} and studied in Section \ref{Section optimizations problems}.

Thermodynamics of models without long--range repulsions is then discussed in
Section \ref{equilibirum.paragraph negative}. We start, indeed, in Section %
\ref{Section Preliminaries copy(1)} with some preliminary results about the
thermodynamics of approximating interactions of long--range models, see
Definition \ref{definition BCS-type model approximated}.

\section{Thermodynamics of approximating interactions\label{Section
Preliminaries copy(1)}%
\index{Interaction!approximating!thermodynamics}}

As a preliminary step, we describe the thermodynamic limit 
\begin{equation*}
P_{\mathfrak{m}}\left( c_{a}\right) :=\underset{l\rightarrow \infty }{\lim }%
\left\{ p_{l}\left( c_{a}\right) \right\}
\end{equation*}%
of the pressure $p_{l}\left( c_{a}\right) $ (\ref{pression approximated})
associated with the internal energy $U_{l}(c_{a}):=U_{\Lambda _{l}}^{\Phi
(c_{a})}$ (\ref{internal and surface energies approximated}) for any $%
c_{a}\in L^{2}(\mathcal{A},\mathbb{C})$. This question is already solved by
Theorem \ref{BCS main theorem 1} for all $\mathfrak{m}\in \mathfrak{\mathcal{%
M}}_{1}$, and so, in particular for\ $(\Phi (c_{a}),0,0)\in \mathfrak{%
\mathcal{M}}_{1}$, see Definition \ref{definition BCS-type model
approximated}. We give this result together with additional properties as a
proposition:

\begin{proposition}[Pressure of approximating interactions of $\mathfrak{m}%
\in \mathcal{M}_{1}$]
\label{corrolaire sympa}\emph{(i)} 
\index{Interaction!approximating!pressure}For any $c_{a}\in L^{2}(\mathcal{A}%
,\mathbb{C})$,%
\begin{equation*}
P_{\mathfrak{m}}\left( c_{a}\right) =-\inf\limits_{\rho \in E_{1}}\,f_{%
\mathfrak{m}}\left( \rho ,c_{a}\right) =-\underset{%
\hat{\rho}\in \mathcal{E}_{1}}{\inf }f_{\mathfrak{m}}\left( \hat{\rho}%
,c_{a}\right)
\end{equation*}%
with the map $\left( \rho ,c_{a}\right) \mapsto f_{\mathfrak{m}}\left( \rho
,c_{a}\right) $ defined by (\ref{free--energy density approximated 1}), see
also (\ref{free--energy density approximated 2}) just below. \newline
\emph{(ii) }The map $c_{a}\mapsto P_{\mathfrak{m}}\left( c_{a}\right) $ from 
$L^{2}(\mathcal{A},\mathbb{C})$ to $\mathbb{R}$ is convex and Lipschitz norm
continuous as, for all $c_{a},c_{a}^{\prime }\in L^{2}(\mathcal{A},\mathbb{C}%
)$,%
\begin{equation*}
|P_{\mathfrak{m}}(c_{a})-P_{\mathfrak{m}}(c_{a}^{\prime })|\leq 2(\Vert \Phi
_{a}\Vert _{2}+\Vert \Phi _{a}^{\prime }\Vert _{2})\Vert c_{a}-c_{a}^{\prime
}\Vert _{2}.
\end{equation*}%
It is also continuous w.r.t. the weak topology on any ball $\mathcal{B}%
_{R}\left( 0\right) \subseteq L^{2}(\mathcal{A},\mathbb{C})$ of arbitrary
radius $R>0$ centered at $0$.
\end{proposition}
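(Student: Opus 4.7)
The plan is to exploit the fact that the approximating interaction $\Phi (c_{a})$ defined in Definition \ref{definition BCS-type model approximated} is itself an element of $\mathcal{W}_{1}$, so that $(\Phi (c_{a}),0,0)\in \mathcal{M}_{1}$ is a purely local model and all of the machinery of Theorem \ref{BCS main theorem 1} applies directly. By the linearity of $\Phi \mapsto U_{\Lambda _{l}}^{\Phi }$ and of $\Phi \mapsto e_{\Phi }(\rho )$ (Lemma \ref{Th.en.func}), a direct computation from (\ref{internal and surface energies approximated}) shows that $p_{l}(c_{a})=p_{l,(\Phi (c_{a}),0,0)}$ and that the local free--energy density functional $f_{\Phi (c_{a})}$ of Definition \ref{Remark free energy density} coincides with the map $\rho \mapsto f_{\mathfrak{m}}(\rho ,c_{a})$ given in (\ref{free--energy density approximated 1}). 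The first equality in part (i) is then a direct application of Theorem \ref{BCS main theorem 1} (i) to $(\Phi (c_{a}),0,0)$.

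To replace $E_{1}$ by $\mathcal{E}_{1}$ in the infimum, I would invoke the weak$^{\ast }$--pseudo--continuity of the entropy density provided by Lemma \ref{lemma property entropy} (ii): for every $\rho \in E_{1}$ there is a sequence $\{\hat{\rho}_{n}\}\subseteq \mathcal{E}_{1}$ with $\hat{\rho}_{n}\to \rho $ weak$^{\ast }$ and $s(\hat{\rho}_{n})\to s(\rho )$. Since the remaining part $\rho \mapsto e_{\Phi (c_{a})}(\rho )$ of $f_{\mathfrak{m}}(\cdot ,c_{a})$ is weak$^{\ast }$--continuous (Lemma \ref{Th.en.func} (i)), we obtain $f_{\mathfrak{m}}(\hat{\rho}_{n},c_{a})\to f_{\mathfrak{m}}(\rho ,c_{a})$, so that $\inf_{\mathcal{E}_{1}}f_{\mathfrak{m}}(\cdot ,c_{a})\leq f_{\mathfrak{m}}(\rho ,c_{a})$ for every $\rho \in E_{1}$, and the desired equality follows.

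For part (ii), convexity is immediate from the representation $P_{\mathfrak{m}}(c_{a})=\sup_{\rho \in E_{1}}[-f_{\mathfrak{m}}(\rho ,c_{a})]$, since for each fixed $\rho $ the map $c_{a}\mapsto f_{\mathfrak{m}}(\rho ,c_{a})$ is real--affine. The norm--Lipschitz estimate is obtained uniformly in $\rho $ by bounding the only $c_{a}$--dependent term of $f_{\mathfrak{m}}$ via Cauchy--Schwarz in $L^{2}(\mathcal{A},\mathbb{C})$, using the pointwise bound $|e_{\Phi _{a}}(\rho )|\leq \Vert \Phi _{a}\Vert _{\mathcal{W}_{1}}$ (cf.\ (\ref{eq:enpersite bounded})) integrated in $a$; passing to the supremum over $\rho \in E_{1}$ preserves the stated constant $2(\Vert \Phi _{a}\Vert _{2}+\Vert \Phi _{a}^{\prime }\Vert _{2})$.

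The hard part will be continuity with respect to the weak topology on any bounded ball $\mathcal{B}_{R}(0)\subseteq L^{2}(\mathcal{A},\mathbb{C})$. The key observation is that the map
\[
F:(E_{1},\text{weak}^{\ast })\to (L^{2}(\mathcal{A},\mathbb{C}),\text{norm}),\qquad F(\rho )(a):=e_{\Phi _{a}}(\rho )+ie_{\Phi _{a}^{\prime }}(\rho ),
\]
is continuous: pointwise convergence $F(\rho _{n})(a)\to F(\rho )(a)$ follows from the weak$^{\ast }$--continuity in Lemma \ref{Th.en.func} (i), and the $L^{2}(\mathcal{A})$--integrable majorant $\Vert \Phi _{a}\Vert _{\mathcal{W}_{1}}+\Vert \Phi _{a}^{\prime }\Vert _{\mathcal{W}_{1}}$ (integrability coming from $\mathfrak{m}\in \mathcal{M}_{1}$) delivers $L^{2}$--norm convergence via Lebesgue's dominated convergence theorem. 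Hence $F(E_{1})$ is norm--compact in $L^{2}(\mathcal{A},\mathbb{C})$, and since weak convergence of a uniformly bounded net in $L^{2}$ is uniform on norm--compact subsets (standard $\varepsilon $--net argument), one obtains
\[
\sup_{\rho \in E_{1}}|f_{\mathfrak{m}}(\rho ,c_{a}^{(n)})-f_{\mathfrak{m}}(\rho ,c_{a})|\longrightarrow 0
\]
whenever $c_{a}^{(n)}\rightharpoonup c_{a}$ in $\mathcal{B}_{R}(0)$, from which the reverse triangle inequality for suprema yields $P_{\mathfrak{m}}(c_{a}^{(n)})\to P_{\mathfrak{m}}(c_{a})$.
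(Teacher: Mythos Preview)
Your argument is correct. Parts (i) and the convexity/Lipschitz portions of (ii) follow essentially the same path as the paper: the paper cites Lemma \ref{lemma property free--energy density functional copy(1)} for the reduction to $\mathcal{E}_{1}$, while you unfold the special case via Lemma \ref{lemma property entropy} (ii), which is exactly (\ref{quasi-continuity}) applied to the purely local model $(\Phi(c_{a}),0,0)$.

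For the weak continuity on balls, you take a genuinely different route. The paper argues equicontinuity of the family $\{c_{a}\mapsto f_{\mathfrak{m}}(\rho,c_{a})\}_{\rho\in E_{1}}$ first for discrete models $\mathfrak{m}\in\mathcal{M}_{1}^{\mathrm{d}}$ (where the relevant inner product collapses to a finite sum against indicators $\mathbf{1}_{I_{k}}$), and then passes to general $\mathfrak{m}\in\mathcal{M}_{1}$ by an $\varepsilon/3$ density argument. You instead show directly that $\rho\mapsto F(\rho)=e_{\Phi_{a}}(\rho)+ie_{\Phi_{a}'}(\rho)$ is continuous from $(E_{1},\text{weak}^{\ast})$ to $(L^{2}(\mathcal{A}),\text{norm})$ via dominated convergence (using the square-integrable majorant $\Vert\Phi_{a}\Vert_{\mathcal{W}_{1}}+\Vert\Phi_{a}'\Vert_{\mathcal{W}_{1}}$ and metrizability of $E_{1}$), so that $F(E_{1})$ is norm-compact; then the standard fact that bounded weak convergence in $L^{2}$ is uniform on norm-compact sets yields the desired equicontinuity in one step. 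Your approach avoids the approximation by discrete models and is more self-contained; the paper's approach, on the other hand, makes the finite-dimensional mechanism (reduction to finitely many $\langle c_{a},\mathbf{1}_{I_{k}}\rangle$) explicit. One minor point worth noting in your write-up: the $c_{a}$-dependent term involves $\langle F(\rho),\gamma_{a}c_{a}\rangle$, but since multiplication by $\gamma_{a}\in\{-1,1\}$ is an $L^{2}$-isometry, it preserves both norm-compactness and weak convergence, so no difficulty arises.
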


\begin{proof}%
The first assertion (i) is just Lemma \ref{lemma property free--energy
density functional copy(1)} and Theorem \ref{BCS main theorem 1} (i) applied
to the (local) model $(\Phi (c_{a}),0,0)\in \mathfrak{\mathcal{M}}_{1}$
because, for all $c_{a}\in L^{2}(\mathcal{A},\mathbb{C})$ and $\rho \in
E_{1} $,%
\index{Interaction!approximating!free--energy density} 
\begin{equation}
f_{\Phi (c_{a})}=f_{\mathfrak{m}}\left( \rho ,c_{a}\right) :=2%
\func{Re}\left\{ \left\langle e_{\Phi _{a}}(\rho )+ie_{\Phi _{a}^{\prime
}}(\rho ),\gamma _{a}c_{a}\right\rangle \right\} +e_{\Phi }(\rho )-\beta
^{-1}s(\rho ),  \label{free--energy density approximated 2}
\end{equation}%
see Definition \ref{Remark free energy density}. The definition of $%
\left\langle \cdot ,\cdot \right\rangle $ is given in Section \ref{Section
Preliminaries}. Thus, the Lipschitz norm continuity of the map $c_{a}\mapsto
P_{\mathfrak{m}}\left( c_{a}\right) $ is a direct consequence of (i)
together with the Cauchy--Schwarz inequality and the uniform upper bound of
Lemma \ref{Th.en.func} (ii). Knowing (i), the convexity of $c_{a}\mapsto P_{%
\mathfrak{m}}\left( c_{a}\right) $ is also easy to deduce because the map $%
c_{a}\mapsto f_{\mathfrak{m}}\left( \rho ,c_{a}\right) $ is obviously real
linear for any $\rho \in E_{1}$. The proof of the continuity of $%
c_{a}\mapsto P_{\mathfrak{m}}\left( c_{a}\right) $ w.r.t. the weak topology
on any ball $\mathcal{B}_{R}\left( 0\right) $ results from the weak
equicontinuity of the family%
\begin{equation}
\{c_{a}\mapsto f_{\mathfrak{m}}\left( \rho ,c_{a}\right) \}_{\rho \in E_{1}}
\label{family equicont1}
\end{equation}%
of real linear functionals on $\mathcal{B}_{R}\left( 0\right) $. The latter
is proven as follows.

If $\mathfrak{m}=\{\Phi \}\cup \{\Phi _{k},\Phi _{k}^{\prime
}\}_{k=1}^{N}\in \mathcal{M}_{1}^{\mathrm{d}}$ is a discrete model then the
family (\ref{family equicont1}) of maps is weakly equicontinuous on $L^{2}(%
\mathcal{A},\mathbb{C})$. This follows from the (uniform) upper bound 
\begin{equation*}
\left\vert \ \left\langle e_{\Phi _{a}}(\rho )+ie_{\Phi _{a}^{\prime }}(\rho
),\gamma _{a}c_{a}^{\prime }\right\rangle \ \right\vert \leq
\sum\limits_{k=1}^{N}\left( \Vert \Phi _{k}\Vert _{\mathcal{W}_{1}}+\Vert
\Phi _{k}^{\prime }\Vert _{\mathcal{W}_{1}}\right) \left\vert \left\langle
c_{a}^{\prime },\mathbf{1}_{I_{k}}\right\rangle \right\vert ,
\end{equation*}%
satisfied for all $\rho \in E_{1}$, where $I_{k}\in \mathfrak{A}$ are
conveniently chosen subsets of $\mathcal{A}$ such that $\mathfrak{a}\left(
I_{k}\right) <\infty $ for $k\in \{1,\ldots ,N\}$. Let $\varepsilon ,R>0$
and $\mathfrak{m}\in \mathcal{M}_{1}$. From the density of $\mathcal{M}_{1}^{%
\mathrm{d}}$ in $\mathcal{M}_{1}$ and the uniform upper bound of Lemma \ref%
{Th.en.func} (ii) combined with the Cauchy--Schwarz inequality, there is $%
\mathfrak{m}^{\prime }\in \mathcal{M}_{1}^{\mathrm{d}}$ such that, for all $%
c_{a}\in \mathcal{B}_{R}\left( 0\right) $ and $\rho \in E_{1}$,%
\begin{equation*}
\left\vert f_{\mathfrak{m}}\left( \rho ,c_{a}\right) -f_{\mathfrak{m}%
^{\prime }}\left( \rho ,c_{a}\right) \right\vert \leq \frac{\varepsilon }{3}.
\end{equation*}%
By the equicontinuity on $L^{2}(\mathcal{A},\mathbb{C})$ of the family (\ref%
{family equicont1}) of maps for any discrete models, for all $c_{a}\in 
\mathcal{B}_{R}\left( 0\right) $ there is a weak neighborhood $\mathcal{V}%
_{\epsilon }$ of $c_{a}$ such that, for all $c_{a}^{\prime }\in \mathcal{V}%
_{\epsilon }$ and all $\rho \in E_{1}$, 
\begin{equation*}
\left\vert f_{\mathfrak{m}^{\prime }}\left( \rho ,c_{a}\right) -f_{\mathfrak{%
m}^{\prime }}\left( \rho ,c_{a}^{\prime }\right) \right\vert \leq \frac{%
\varepsilon }{3}.
\end{equation*}%
Therefore, for all $c_{a}\in \mathcal{B}_{R}\left( 0\right) $, there is a
weak neighborhood $\mathcal{V}_{\epsilon }$ of $c_{a}$ such that, for all $%
c_{a}^{\prime }\in \mathcal{V}_{\epsilon }$ and all $\rho \in E_{1}$, 
\begin{equation*}
\left\vert f_{\mathfrak{m}}\left( \rho ,c_{a}^{\prime }\right) -f_{\mathfrak{%
m}}\left( \rho ,c_{a}\right) \right\vert \leq \varepsilon .
\end{equation*}%
In other words, for any $\mathfrak{m}\in \mathcal{M}_{1}$, the family (\ref%
{family equicont1}) of maps is weakly equicontinuous on $\mathcal{B}%
_{R}\left( 0\right) $ which yields the continuity of the map $c_{a}\mapsto
P_{\mathfrak{m}}\left( c_{a}\right) $ in the weak topology on $\mathcal{B}%
_{R}\left( 0\right) $. 
\end{proof}%

\section{Structure of the set $\mathit{M}_{\mathfrak{m}}^{\sharp }=\mathit{%
\Omega }_{\mathfrak{m}}^{\sharp }$ of t.i. equilibrium states\label%
{equilibirum.paragraph negative}}

\index{States!equilibrium}We analyze models without long--range repulsions
(Definition \ref{long range attraction-repulsion} ($+$)), i.e., $\mathfrak{m}%
\in \mathcal{M}_{1}$ satisfying $\Phi _{a,+}=\Phi _{a,+}^{\prime }=0$
(a.e.). Their (infinite--volume) pressure 
\begin{equation*}
\mathrm{P}_{\mathfrak{m}}:=\mathrm{P}_{\mathfrak{m}}^{\sharp }=\mathrm{P}_{%
\mathfrak{m}}^{\flat }
\end{equation*}%
defined in Definition \ref{Pressure} is already given by Theorem \ref{BCS
main theorem 1} (see also Theorem \ref{theorem purement repulsif sympa}) and
we first prove Theorem \ref{theorem saddle point}. In fact, by using the
simple inequality%
\index{Bogoliubov approximation} 
\begin{equation}
|\rho \left( A-c\right) |^{2}=|\rho \left( A\right) |^{2}-2%
\func{Re}\left\{ \rho \left( A\right) \bar{c}\right\} +|c|^{2}\geq 0
\label{eq idiot sympabis}
\end{equation}%
for any $c\in \mathbb{C}$ and $A\in \mathcal{U}$, Theorem \ref{theorem
saddle point} for models without long--range repulsions is not difficult to
show. Indeed, (\ref{eq idiot sympabis}) yields the following lemma:

\begin{lemma}[$c_{a,\pm }$--approximation of $\Vert \protect\gamma _{a,\pm }%
\protect\rho (\mathfrak{e}_{\Phi _{a}}+i\mathfrak{e}_{\Phi _{a}^{\prime
}})\Vert _{2}^{2}$]
\label{eq idiot sympa}\mbox{ }\newline
For any $\mathfrak{m}\in \mathcal{M}_{1}$ and all $\rho \in E_{1}$,%
\begin{equation*}
\underset{c_{a,\pm }\in L_{\pm }^{2}(\mathcal{A},\mathbb{C})}{\sup }\left\{
-\Vert c_{a,\pm }\Vert _{2}^{2}+2\func{Re}\{\langle e_{\Phi _{a}}(\rho
)+ie_{\Phi _{a}^{\prime }}(\rho ),c_{a,\pm }\rangle \}\right\} =\Vert \gamma
_{a,\pm }\rho (\mathfrak{e}_{\Phi _{a}}+i\mathfrak{e}_{\Phi _{a}^{\prime
}})\Vert _{2}^{2}
\end{equation*}%
with unique maximizer $d_{a,\pm }\left( \rho \right) =\gamma _{a,\pm
}(e_{\Phi _{a}}(\rho )+ie_{\Phi _{a}^{\prime }}(\rho ))$ (a.e.).
\end{lemma}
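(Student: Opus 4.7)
The plan is to reduce the supremum to a pointwise-in-$a$ optimization and then solve it explicitly via the elementary inequality~\eqref{eq idiot sympabis}, which is already suggested by the placement of the lemma right after that display. First I would unfold the scalar product on $\mathcal{L}^{2}(\mathcal{A},\mathcal{W}_{1})$ (introduced in Section~\ref{Section Preliminaries} for $\mathcal{X}=\mathcal{W}_{1}$) so that the objective becomes
\begin{equation*}
\int_{\mathcal{A}}\Bigl(-|c_{a,\pm}|^{2}+2\func{Re}\{\overline{e_{\Phi_{a}}(\rho)+ie_{\Phi_{a}^{\prime}}(\rho)}\,c_{a,\pm}\}\Bigr)\,\mathrm{d}\mathfrak{a}(a).
\end{equation*}
The constraint $c_{a,\pm}\in L_{\pm}^{2}(\mathcal{A},\mathbb{C})$ is, by the definition~\eqref{definition of positive-negative L2 space}, equivalent to $c_{a,\pm}=\gamma_{a,\pm}c_{a,\pm}$ (a.e.), i.e.\ $c_{a,\pm}(a)=0$ whenever $\gamma_{a,\pm}=0$.

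Next I would apply~\eqref{eq idiot sympabis} with $A$ replaced by the complex number $e_{\Phi_{a}}(\rho)+ie_{\Phi_{a}^{\prime}}(\rho)$ and $c=c_{a,\pm}$, which yields pointwise
\begin{equation*}
-|c_{a,\pm}|^{2}+2\func{Re}\{\overline{e_{\Phi_{a}}(\rho)+ie_{\Phi_{a}^{\prime}}(\rho)}\,c_{a,\pm}\}\leq |e_{\Phi_{a}}(\rho)+ie_{\Phi_{a}^{\prime}}(\rho)|^{2},
\end{equation*}
with equality iff $c_{a,\pm}=e_{\Phi_{a}}(\rho)+ie_{\Phi_{a}^{\prime}}(\rho)$. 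Combined with the support constraint $c_{a,\pm}=\gamma_{a,\pm}c_{a,\pm}$, the sharp pointwise maximum equals $\gamma_{a,\pm}|e_{\Phi_{a}}(\rho)+ie_{\Phi_{a}^{\prime}}(\rho)|^{2}$ (using $\gamma_{a,\pm}^{2}=\gamma_{a,\pm}$), attained uniquely by $d_{a,\pm}(\rho):=\gamma_{a,\pm}(e_{\Phi_{a}}(\rho)+ie_{\Phi_{a}^{\prime}}(\rho))$. Integration over $\mathcal{A}$ then gives exactly $\Vert\gamma_{a,\pm}\rho(\mathfrak{e}_{\Phi_{a}}+i\mathfrak{e}_{\Phi_{a}^{\prime}})\Vert_{2}^{2}$.

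Before concluding I would check that $d_{a,\pm}(\rho)$ genuinely belongs to $L_{\pm}^{2}(\mathcal{A},\mathbb{C})$, which is necessary to promote the pointwise bound to a supremum realized in the admissible class. This follows from the uniform bound $|e_{\Phi_{a}}(\rho)|\leq\Vert\Phi_{a}\Vert_{\mathcal{W}_{1}}$ and similarly for $\Phi_{a}^{\prime}$ (cf.\ Lemma~\ref{Th.en.func}~(ii) and~\eqref{eq:enpersite bounded}), together with $\mathfrak{m}\in\mathcal{M}_{1}$, which guarantees $\Vert\Phi_{a}\Vert_{2},\Vert\Phi_{a}^{\prime}\Vert_{2}<\infty$; the support condition $d_{a,\pm}=\gamma_{a,\pm}d_{a,\pm}$ is built into the definition. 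Uniqueness of the maximizer as an element of $L_{\pm}^{2}(\mathcal{A},\mathbb{C})$ is automatic from the pointwise uniqueness on $\{\gamma_{a,\pm}=1\}$.

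There is no real obstacle here: the only care needed is in correctly handling the support constraint $c_{a,\pm}=\gamma_{a,\pm}c_{a,\pm}$ when applying~\eqref{eq idiot sympabis} pointwise, and in verifying $L^{2}$-membership of the proposed maximizer. The statement is genuinely an elementary completion-of-squares identity, lifted to $L^{2}(\mathcal{A},\mathbb{C})$ by Fubini/pointwise maximization.
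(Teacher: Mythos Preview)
Your proposal is correct and follows essentially the same approach as the paper: both invoke the elementary completion-of-squares inequality~\eqref{eq idiot sympabis} as the sole ingredient. The paper's proof is terser---it simply says the lemma is a direct consequence of~\eqref{eq idiot sympabis} and records the Euler--Lagrange equations---whereas you spell out the pointwise optimization, the support constraint, and the $L^{2}$-membership check for the maximizer; these elaborations are sound but do not constitute a different route.
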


\begin{proof}%
This lemma is a direct consequence of (\ref{eq idiot sympabis}). In
particular, the solution $d_{a,\pm }\left( \rho \right) \in L_{\pm }^{2}(%
\mathcal{A},\mathbb{C})$ of the variational problem satisfies, for all $%
c_{a,-}\in L_{-}^{2}(\mathcal{A},\mathbb{C})$, the Euler--Lagrange equations%
\begin{equation*}
\func{Re}\left\{ \left\langle d_{a,\pm }\left( \rho \right) ,c_{a,\pm
}\right\rangle \right\} =\func{Re}\left\{ \left\langle e_{\Phi _{a}}(\rho
)+ie_{\Phi _{a}^{\prime }}(\rho ),c_{a,\pm }\right\rangle \right\} .
\end{equation*}%
\end{proof}%

Then Theorem \ref{theorem saddle point} for models without long--range
repulsions is a direct consequence of Theorem \ref{BCS main theorem 1} (i)
together with Proposition \ref{corrolaire sympa} and Lemma \ref{eq idiot
sympa}.

\begin{proposition}[Pressure of models without long--range repulsions]
\label{Bogo interaction negative}\mbox{ }\newline
\index{Pressure!variational problems!purely attractive models}For any $%
\mathfrak{m}\in \mathcal{M}_{1}$ satisfying $\Phi _{a,+}=\Phi _{a,+}^{\prime
}=0$ (a.e.), 
\begin{equation*}
\mathrm{P}_{\mathfrak{m}}=-\mathrm{F}_{\mathfrak{m}}^{\sharp }=-\mathrm{F}_{%
\mathfrak{m}}^{\flat }=-\underset{c_{a,-}\in \mathcal{B}_{R,-}}{\inf }%
\mathfrak{f}_{\mathfrak{m}}\left( c_{a,-},0\right) =:-\mathrm{F}_{\mathfrak{m%
}}
\end{equation*}%
with $\mathfrak{f}_{\mathfrak{m}}\left( c_{a,-},0\right) $ defined by
Definition \ref{definition approximating free--energy} and $\mathcal{B}%
_{R,-}\subseteq L_{-}^{2}(\mathcal{A},\mathbb{C})$ (\ref{definition of
positive-negative L2 space}) being a closed ball of sufficiently large
radius $R>0$ centered at $0$.
\end{proposition}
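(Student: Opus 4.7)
The plan is to chain together Theorem \ref{BCS main theorem 1} (i), Lemma \ref{lemma property free--energy density functional copy(1)}, Lemma \ref{eq idiot sympa}, and Proposition \ref{corrolaire sympa} to reduce the variational problem for $\mathrm{P}_\mathfrak{m}$ over t.i.\ states to the one-sided min in $L_-^2$. Two preliminary simplifications exploit the hypothesis $\Phi_{a,+}=\Phi_{a,+}^\prime=0$ (a.e.). First, Definitions \ref{Free-energy density long range} and \ref{Reduced free energy} together with the identity $\Delta_{a,+}(\rho)=\gamma_{a,+}\rho(\mathfrak{e}_{\Phi_a}+i\mathfrak{e}_{\Phi_a^\prime})=0$ (a.e.) give $f_\mathfrak{m}^\flat=f_\mathfrak{m}^\sharp$ on $E_1$, and hence $\mathrm{P}_\mathfrak{m}^\sharp=\mathrm{P}_\mathfrak{m}^\flat$. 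Second, a direct computation using $\gamma_{a,+}\gamma_{a,-}=0$ and $\gamma_{a,\pm}^2=\gamma_{a,\pm}$ shows $\gamma_a c_a=c_{a,+}-c_{a,-}$ for $c_a=c_{a,-}+c_{a,+}\in L_-^2\oplus L_+^2$, and the vanishing of $\gamma_{a,+}\Phi_a$, $\gamma_{a,+}\Phi_a^\prime$ on the support of $c_{a,+}$ forces $\langle\Phi_a+i\Phi_a^\prime,c_{a,+}\rangle=0$. The approximating interaction of Definition \ref{definition BCS-type model approximated} then satisfies $\Phi(c_{a,-}+c_{a,+})=\Phi(c_{a,-})$, so $P_\mathfrak{m}(c_{a,-}+c_{a,+})=P_\mathfrak{m}(c_{a,-})$, and both suprema/infima defining $\mathfrak{f}_\mathfrak{m}^\sharp(c_{a,-})$ and $\mathfrak{f}_\mathfrak{m}^\flat(c_{a,+})$ collapse to the value at $c_{a,+}=0$; this yields $\mathrm{F}_\mathfrak{m}^\sharp=\mathrm{F}_\mathfrak{m}^\flat=\inf_{c_{a,-}}\mathfrak{f}_\mathfrak{m}(c_{a,-},0)$.

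The heart of the argument will be to identify $\inf_{\rho\in E_1}g_\mathfrak{m}(\rho)$ with $\inf_{c_{a,-}}\mathfrak{f}_\mathfrak{m}(c_{a,-},0)$. Applying Lemma \ref{eq idiot sympa} with an overall minus sign (turning the sup into an inf) gives, for each $\rho\in E_1$,
\begin{equation*}
-\Vert\gamma_{a,-}\rho(\mathfrak{e}_{\Phi_a}+i\mathfrak{e}_{\Phi_a^\prime})\Vert_2^2=\inf_{c_{a,-}\in L_-^2(\mathcal{A},\mathbb{C})}\left\{\Vert c_{a,-}\Vert_2^2-2\func{Re}\langle e_{\Phi_a}(\rho)+ie_{\Phi_a^\prime}(\rho),c_{a,-}\rangle\right\}.
\end{equation*}
Using $\gamma_a c_{a,-}=-c_{a,-}$ for $c_{a,-}\in L_-^2$, the bracketed expression equals $\Vert c_{a,-}\Vert_2^2+f_\mathfrak{m}(\rho,c_{a,-})-e_\Phi(\rho)+\beta^{-1}s(\rho)$ with $f_\mathfrak{m}(\rho,c_{a,-})$ as in (\ref{free--energy density approximated 2}). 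Substituting back into $g_\mathfrak{m}(\rho)=-\Vert\gamma_{a,-}\rho(\mathfrak{e}_{\Phi_a}+i\mathfrak{e}_{\Phi_a^\prime})\Vert_2^2+e_\Phi(\rho)-\beta^{-1}s(\rho)$ yields $g_\mathfrak{m}(\rho)=\inf_{c_{a,-}}\{\Vert c_{a,-}\Vert_2^2+f_\mathfrak{m}(\rho,c_{a,-})\}$. Swapping the two infima, which is legitimate since both are over product domains and the joint functional is bounded below (e.g.\ by $-\mathrm{P}_\mathfrak{m}^\sharp$ via Theorem \ref{BCS main theorem 1} applied to $(\Phi(c_{a,-}),0,0)$), I get
\begin{equation*}
\inf_{\rho\in E_1}g_\mathfrak{m}(\rho)=\inf_{c_{a,-}}\left\{\Vert c_{a,-}\Vert_2^2+\inf_{\rho\in E_1}f_\mathfrak{m}(\rho,c_{a,-})\right\}=\inf_{c_{a,-}}\left\{\Vert c_{a,-}\Vert_2^2-P_\mathfrak{m}(c_{a,-})\right\},
\end{equation*}
where the last step invokes Proposition \ref{corrolaire sympa} (i). The right-hand side is precisely $\inf_{c_{a,-}}\mathfrak{f}_\mathfrak{m}(c_{a,-},0)$, and combining this with Theorem \ref{BCS main theorem 1} (i) and Lemma \ref{lemma property free--energy density functional copy(1)} delivers $\mathrm{P}_\mathfrak{m}=-\inf_{c_{a,-}\in L_-^2}\mathfrak{f}_\mathfrak{m}(c_{a,-},0)$.

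It remains to restrict the infimum to a ball $\mathcal{B}_{R,-}\subseteq L_-^2(\mathcal{A},\mathbb{C})$. This is a coercivity check: by Proposition \ref{corrolaire sympa} (ii), $P_\mathfrak{m}(c_{a,-})\leq|P_\mathfrak{m}(0)|+2(\Vert\Phi_a\Vert_2+\Vert\Phi_a^\prime\Vert_2)\Vert c_{a,-}\Vert_2$, so $\mathfrak{f}_\mathfrak{m}(c_{a,-},0)=\Vert c_{a,-}\Vert_2^2-P_\mathfrak{m}(c_{a,-})\to+\infty$ as $\Vert c_{a,-}\Vert_2\to\infty$, and choosing $R$ large enough that $\mathfrak{f}_\mathfrak{m}(c_{a,-},0)>\mathfrak{f}_\mathfrak{m}(0,0)=-P_\mathfrak{m}(0)$ for all $c_{a,-}\notin\mathcal{B}_{R,-}$ confines the infimum to the ball without changing its value. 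I do not foresee serious obstacles. The subtlest point is the swap of infima in the main step; alternatively, if one wanted to avoid it, the same identity could be proved by two inequalities, lower-bounding $g_\mathfrak{m}(\rho)$ by picking $c_{a,-}=\gamma_{a,-}(e_{\Phi_a}(\rho)+ie_{\Phi_a^\prime}(\rho))$ (the optimizer of Lemma \ref{eq idiot sympa}) and upper-bounding it by freely choosing $c_{a,-}$.
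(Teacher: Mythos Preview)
Your proof is correct and follows essentially the same approach as the paper's: both use Lemma \ref{eq idiot sympa} to rewrite the negative-part quadratic term as an infimum over $c_{a,-}$, commute the two infima, and invoke Proposition \ref{corrolaire sympa} (i) to identify the inner infimum as $-P_{\mathfrak{m}}(c_{a,-})$, then use the Lipschitz bound of Proposition \ref{corrolaire sympa} (ii) for the restriction to a ball. The paper carries out the computation over $\mathcal{E}_1$ via $f_{\mathfrak{m}}^\sharp$ and Lemma \ref{lemma property free--energy density functional copy(1)}, while you work over $E_1$ via $g_{\mathfrak{m}}$ and additionally spell out why $\Phi(c_{a,-}+c_{a,+})=\Phi(c_{a,-})$ forces $\mathrm{F}_{\mathfrak{m}}^\sharp=\mathrm{F}_{\mathfrak{m}}^\flat$; these are cosmetic differences.
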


\begin{proof}
If $\Phi _{a,+}=\Phi _{a,+}^{\prime }=0$ (a.e.) then, for all extreme states 
$%
\hat{\rho}\in \mathcal{E}_{1}$, 
\begin{equation*}
f_{\mathfrak{m}}^{\sharp }\left( \hat{\rho}\right) =g_{\mathfrak{m}}\left( 
\hat{\rho}\right) =-\left\Vert \gamma _{a,-}\hat{\rho}\left( \mathfrak{e}%
_{\Phi _{a}}+i\mathfrak{e}_{\Phi _{a}^{\prime }}\right) \right\Vert
_{2}^{2}+e_{\Phi }(\hat{\rho})-\beta ^{-1}s(\hat{\rho}),
\end{equation*}%
see Lemma \ref{lemma property free--energy density functional} (ii). From
Lemma \ref{eq idiot sympa} it follows that 
\begin{equation}
\inf\limits_{\hat{\rho}\in \mathcal{E}_{1}}\,f_{\mathfrak{m}}^{\sharp }(\hat{%
\rho})=\inf\limits_{\hat{\rho}\in \mathcal{E}_{1}}\left\{ \underset{%
c_{a,-}\in L_{-}^{2}(\mathcal{A},\mathbb{C})}{\inf }\left\{ \left\Vert
c_{a,-}\right\Vert _{2}^{2}+f_{\mathfrak{m}}\left( \hat{\rho},c_{a,-}\right)
\right\} \right\}  \label{equallity for gap equation}
\end{equation}%
with $f_{\mathfrak{m}}\left( \rho ,c_{a,-}\right) $ defined by (\ref%
{free--energy density approximated 2}) for $\Phi _{a,+}=\Phi _{a,+}^{\prime
}=0$ (a.e.). The infima in Equality (\ref{equallity for gap equation})
obviously commute with each other and, by doing this, we get via Theorem \ref%
{BCS main theorem 1} (i) and Proposition \ref{corrolaire sympa} (i) that 
\begin{equation*}
\mathrm{P}_{\mathfrak{m}}=\underset{c_{a,-}\in L_{-}^{2}(\mathcal{A},\mathbb{%
C})}{\sup }\left\{ -\left\Vert c_{a,-}\right\Vert _{2}^{2}+P_{\mathfrak{m}%
}\left( c_{a,-}\right) \right\} =-\underset{c_{a,-}\in L_{-}^{2}(\mathcal{A},%
\mathbb{C})}{\inf }\mathfrak{f}_{\mathfrak{m}}\left( c_{a,-},0\right)
<\infty .
\end{equation*}%
Finally, the existence of a radius $R>0$ such that%
\begin{equation*}
\underset{c_{a,-}\in L_{-}^{2}(\mathcal{A},\mathbb{C})}{\inf }\mathfrak{f}_{%
\mathfrak{m}}\left( c_{a,-},0\right) =\underset{c_{a,-}\in \mathcal{B}_{R,-}}%
{\inf }\mathfrak{f}_{\mathfrak{m}}\left( c_{a,-},0\right)
\end{equation*}%
directly follows from the upper bound of Proposition \ref{corrolaire sympa}
(ii). 
\end{proof}%

The description of the set $\mathit{\Omega }_{\mathfrak{m}}^{\sharp }$ of
generalized t.i. equilibrium states (Definition \ref{definition equilibirum
state}) is also easy to perform when there is no long--range repulsions.
Indeed, the free--energy density functional $f_{\mathfrak{m}}^{\sharp }$
becomes weak$^{\ast }$--lower semi--continuous when $\Phi _{a,+}=\Phi
_{a,+}^{\prime }=0$ (a.e.), see discussions after Lemma \ref{lemma property
free--energy density functional}. In particular, the variational problem 
\begin{equation*}
\mathrm{P}_{\mathfrak{m}}=-\inf\limits_{\rho \in E_{1}}\,f_{\mathfrak{m}%
}^{\sharp }(\rho )
\end{equation*}%
has t.i. minimizers, i.e., $\mathit{\Omega }_{\mathfrak{m}}^{\sharp }=%
\mathit{M}_{\mathfrak{m}}^{\sharp }$ (Definition \ref{definition equilibirum
state copy(1)}). Recall that $\mathit{\Omega }_{\mathfrak{m}}^{\sharp }$ is
convex and weak$^{\ast }$--compact, by Lemma \ref{lemma minimum sympa
copy(1)}, and since $\mathit{M}_{\mathfrak{m}}^{\sharp }=\mathit{\Omega }_{%
\mathfrak{m}}^{\sharp }$ in this case, the non--empty set $\mathit{\Omega }_{%
\mathfrak{m}}^{\sharp }$ is a closed face of $E_{1}$ by Lemma \ref{lemma
minimum sympa copy(3)}. Therefore, to extract the structure of the set $%
\mathit{\Omega }_{\mathfrak{m}}^{\sharp }=\mathit{M}_{\mathfrak{m}}^{\sharp
} $, it suffices to describe extreme states $\hat{\omega}\in \mathit{\Omega }%
_{\mathfrak{m}}^{\sharp }\cap \mathcal{E}_{1}$ which are directly related
with the solutions $d_{a,-}\in \mathcal{C}_{\mathfrak{m}}^{\sharp }\subseteq
L_{-}^{2}(\mathcal{A},\mathbb{C})$ of the variational problem given in
Proposition \ref{Bogo interaction negative}:

\begin{proposition}[Gap equations]
\label{theorem structure etat equilibre copy(1)}\mbox{ }\newline
\index{Gap equations!purely attractive models}Let $\mathfrak{m}\in \mathcal{M%
}_{1}$ be a model without long--range repulsions: $\Phi _{a,+}=\Phi
_{a,+}^{\prime }=0$ (a.e.). \newline
\emph{(i)} For all ergodic states $%
\hat{\omega}\in \mathit{\Omega }_{\mathfrak{m}}^{\sharp }\cap \mathcal{E}%
_{1} $, 
\begin{equation*}
d_{a,-}:=e_{\Phi _{a}}(\hat{\omega})+ie_{\Phi _{a}^{\prime }}(\hat{\omega}%
)\in \mathcal{C}_{\mathfrak{m}}^{\sharp }
\end{equation*}%
and $\hat{\omega}\in \mathit{M}_{\Phi \left( d_{a,-}\right) }$ with $\mathit{%
M}_{\Phi \left( d_{a,-}\right) }$ being described in Lemma \ref{remark
equilibrium state approches}.\newline
\emph{(ii)} Conversely, for any fixed $d_{a,-}\in \mathcal{C}_{\mathfrak{m}%
}^{\sharp }$, $\mathit{M}_{\Phi \left( d_{a,-}\right) }\cap \mathcal{E}%
_{1}\subseteq \mathit{\Omega }_{\mathfrak{m}}^{\sharp }\cap \mathcal{E}_{1}$
and all states $\omega \in \mathit{M}_{\Phi \left( d_{a,-}\right) }$ satisfy 
\begin{equation}
d_{a,-}=e_{\Phi _{a}}(\hat{\omega})+ie_{\Phi _{a}^{\prime }}(\hat{\omega})%
\mathrm{\ (a.e.)}.\text{ }  \label{gap equation no repulsion}
\end{equation}
\end{proposition}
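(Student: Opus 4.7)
The plan is to exploit the no--repulsion assumption to reduce the claim to a chain of variational identities already available in the manuscript: Lemma~\ref{eq idiot sympa} decouples the quadratic term $\Vert\gamma_{a,-}\rho(\mathfrak{e}_{\Phi_a}+i\mathfrak{e}_{\Phi_a'})\Vert_2^2$ via a supremum over $c_{a,-}\in L_-^2$, Proposition~\ref{Bogo interaction negative} identifies $-\mathrm{P}_{\mathfrak{m}}$ with $\mathrm{F}_{\mathfrak{m}}^{\sharp}$, Proposition~\ref{corrolaire sympa} represents $-P_{\mathfrak{m}}(c_{a,-})$ as the infimum of $f_{\mathfrak{m}}(\,\cdot\,,c_{a,-})$ over $E_1$, and Theorem~\ref{theorem purement repulsif sympa}($-$) guarantees that $\mathit{\Omega}_{\mathfrak{m}}^{\sharp}=\mathit{M}_{\mathfrak{m}}^{\sharp}$ is a closed face of $E_1$. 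Once the sign conventions are fixed (on the support of any $c_{a,-}\in L_-^2$, $\gamma_a=-1$, so $\gamma_a c_{a,-}=-c_{a,-}$ and consequently $\Vert c_{a,-}\Vert_2^2+f_{\mathfrak{m}}(\rho,c_{a,-})$ equals $\Vert c_{a,-}\Vert_2^2 -2\mathrm{Re}\langle e_{\Phi_a}(\rho)+ie_{\Phi_a'}(\rho),c_{a,-}\rangle + e_\Phi(\rho)-\beta^{-1}s(\rho)$), Lemma~\ref{eq idiot sympa} in its infimum form gives $g_{\mathfrak{m}}(\rho)=\inf_{c_{a,-}\in L_-^2}\{\Vert c_{a,-}\Vert_2^2+f_{\mathfrak{m}}(\rho,c_{a,-})\}$ with the \emph{unique} minimizer $d_{a,-}(\rho):=\gamma_{a,-}(e_{\Phi_a}(\rho)+ie_{\Phi_a'}(\rho))$.

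For part~(i), I would take $\hat{\omega}\in\mathit{\Omega}_{\mathfrak{m}}^{\sharp}\cap\mathcal{E}_1$. By Theorem~\ref{theorem purement repulsif sympa}($-$), $\hat{\omega}\in\mathit{M}_{\mathfrak{m}}^{\sharp}$, so $f_{\mathfrak{m}}^{\sharp}(\hat{\omega})=-\mathrm{P}_{\mathfrak{m}}$, and by Theorem~\ref{theorem ergodic extremal} and Definitions~\ref{Free-energy density long range}--\ref{Reduced free energy}, $f_{\mathfrak{m}}^{\sharp}(\hat{\omega})=g_{\mathfrak{m}}(\hat{\omega})$. Applying the infimum formulation above at $\rho=\hat{\omega}$,
\begin{equation*}
-\mathrm{P}_{\mathfrak{m}}=g_{\mathfrak{m}}(\hat{\omega})=\Vert d_{a,-}(\hat{\omega})\Vert_2^2+f_{\mathfrak{m}}(\hat{\omega},d_{a,-}(\hat{\omega}))\geq\Vert d_{a,-}(\hat{\omega})\Vert_2^2-P_{\mathfrak{m}}(d_{a,-}(\hat{\omega}))=\mathfrak{f}_{\mathfrak{m}}^{\sharp}(d_{a,-}(\hat{\omega}))\geq\mathrm{F}_{\mathfrak{m}}^{\sharp}=-\mathrm{P}_{\mathfrak{m}},
\end{equation*}
using Proposition~\ref{corrolaire sympa} in the middle inequality and Proposition~\ref{Bogo interaction negative} at the end. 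Equality throughout immediately yields $d_{a,-}(\hat{\omega})\in\mathcal{C}_{\mathfrak{m}}^{\sharp}$ and $\hat{\omega}\in\mathit{M}_{\Phi(d_{a,-}(\hat{\omega}))}$, since $\hat{\omega}$ saturates the infimum defining $-P_{\mathfrak{m}}(d_{a,-}(\hat{\omega}))$. Noting that $e_{\Phi_a}(\hat{\omega})+ie_{\Phi_a'}(\hat{\omega})$ coincides with $d_{a,-}(\hat{\omega})$ a.e.\ because $\Phi_a,\Phi_a'=0$ off $\mathrm{supp}\,\gamma_{a,-}$ (from $\Phi_{a,+}=\Phi_{a,+}'=0$ a.e.), this is the first statement.

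For part~(ii), given $d_{a,-}\in\mathcal{C}_{\mathfrak{m}}^{\sharp}$ and $\omega\in\mathit{M}_{\Phi(d_{a,-})}$, Proposition~\ref{corrolaire sympa} gives $f_{\mathfrak{m}}(\omega,d_{a,-})=-P_{\mathfrak{m}}(d_{a,-})$, hence
\begin{equation*}
-\mathrm{P}_{\mathfrak{m}}=\mathrm{F}_{\mathfrak{m}}^{\sharp}=\mathfrak{f}_{\mathfrak{m}}^{\sharp}(d_{a,-})=\Vert d_{a,-}\Vert_2^2+f_{\mathfrak{m}}(\omega,d_{a,-})\geq\inf_{c_{a,-}\in L_-^2}\bigl\{\Vert c_{a,-}\Vert_2^2+f_{\mathfrak{m}}(\omega,c_{a,-})\bigr\}=g_{\mathfrak{m}}(\omega)\geq-\mathrm{P}_{\mathfrak{m}},
\end{equation*}
the last bound coming from Lemma~\ref{lemma property free--energy density functional copy(1)}. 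Equality of the outer terms forces $d_{a,-}$ to realize the infimum of Lemma~\ref{eq idiot sympa} at $\omega$, and \emph{uniqueness} of that minimizer gives the gap equation $d_{a,-}=\gamma_{a,-}(e_{\Phi_a}(\omega)+ie_{\Phi_a'}(\omega))$, which on $\mathrm{supp}\,\gamma_{a,-}$ equals $e_{\Phi_a}(\omega)+ie_{\Phi_a'}(\omega)$ a.e. Moreover $g_{\mathfrak{m}}(\omega)=-\mathrm{P}_{\mathfrak{m}}$ places $\omega$ in $\mathit{\hat{M}}_{\mathfrak{m}}\subseteq\mathit{\Omega}_{\mathfrak{m}}^{\sharp}=\mathit{M}_{\mathfrak{m}}^{\sharp}$ (Lemma~\ref{lemma minimum sympa copy(2)}(ii)), so $\mathit{M}_{\Phi(d_{a,-})}\cap\mathcal{E}_1\subseteq\mathit{\Omega}_{\mathfrak{m}}^{\sharp}\cap\mathcal{E}_1$.

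There is no real analytic obstacle in this argument, which is essentially a sandwich plus a uniqueness extraction. The only point demanding care is the \emph{sign bookkeeping}: matching the $c_{a,-}$--variable from Definition~\ref{definition two--person zero--sum game} (where $\gamma_a=-1$ produces a sign flip in the linear term of $f_{\mathfrak{m}}(\rho,\,\cdot\,)$) with the quadratic identity in Lemma~\ref{eq idiot sympa}, so that the \emph{same} $d_{a,-}=\gamma_{a,-}(e_{\Phi_a}(\omega)+ie_{\Phi_a'}(\omega))$ simultaneously attains $\mathrm{F}_{\mathfrak{m}}^{\sharp}$, solves the gap equation, and makes $\omega$ a minimizer of $f_{\mathfrak{m}}(\,\cdot\,,d_{a,-})$. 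Once this identification is in place, every step reduces to the elementary strict convexity $(x-c)^2\geq 0$ underlying Lemma~\ref{eq idiot sympa}.
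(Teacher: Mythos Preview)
Your proof is correct and follows essentially the same approach as the paper: both arguments rest on the double--infimum identity \eqref{equallity for gap equation} obtained from Lemma~\ref{eq idiot sympa}, together with Proposition~\ref{corrolaire sympa} and Proposition~\ref{Bogo interaction negative}. The paper phrases the core step more tersely as ``the two infima commute'', whereas you unfold this into explicit sandwich chains of inequalities and then read off the conclusions from equality throughout; the logical content and the key ingredients (in particular the uniqueness of the minimizer in Lemma~\ref{eq idiot sympa}) are identical.
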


\begin{proof}
(i) Recall that $\mathit{\Omega }_{\mathfrak{m}}^{\sharp }=\mathit{M}_{%
\mathfrak{m}}^{\sharp }$. Any $\hat{\omega}\in \mathit{\Omega }_{\mathfrak{m}%
}^{\sharp }\cap \mathcal{E}_{1}$ is a solution of the l.h.s. of (\ref%
{equallity for gap equation}) and the solution $d_{a,-}=d_{a,-}\left( \hat{%
\omega}\right) $ of the variational problem 
\begin{equation*}
\underset{c_{a,-}\in L_{-}^{2}(\mathcal{A},\mathbb{C})}{\inf }\left\{
\left\Vert c_{a,-}\right\Vert _{2}^{2}+f_{\mathfrak{m}}\left( \hat{\omega}%
,c_{a,-}\right) \right\}
\end{equation*}%
satisfies the Euler--Lagrange equations (\ref{gap equation no repulsion}),
by Lemma \ref{eq idiot sympa}. The two infima in (\ref{equallity for gap
equation}) commute with each other. It is what it is done above to prove
Proposition \ref{Bogo interaction negative}. Therefore, $d_{a,-}\left( \hat{%
\omega}\right) \in \mathcal{C}_{\mathfrak{m}}^{\sharp }$ and, by (\ref%
{free--energy density approximated 2}), $\hat{\omega}$ belongs to the set $%
\mathit{M}_{\Phi \left( d_{a,-}\right) }=\mathit{\Omega }_{\Phi \left(
d_{a,-}\right) }$ of t.i. equilibrium states of the approximating
interaction $\Phi \left( d_{a,-}\right) $.

(ii) Any $d_{a,-}\in \mathcal{C}_{\mathfrak{m}}^{\sharp }$ is solution of
the variational problem given in Proposition \ref{Bogo interaction negative}%
, that is, 
\begin{equation}
\underset{c_{a,-}\in L_{-}^{2}(\mathcal{A},\mathbb{C})}{\inf }\left\{
\left\Vert c_{a,-}\right\Vert _{2}^{2}+\inf\limits_{\rho \in E_{1}}f_{%
\mathfrak{m}}\left( \rho ,c_{a,-}\right) \right\} ,
\label{variational problem encore stupide}
\end{equation}%
see Proposition \ref{corrolaire sympa} (i). Since the two infima in (\ref%
{variational problem encore stupide}) commute with each other as before, any
t.i. equilibrium state $\omega \in \mathit{M}_{\Phi \left( d_{a,-}\right) }$
satisfies (\ref{gap equation no repulsion}) because of Lemma \ref{eq idiot
sympa}, and $\mathit{M}_{\Phi \left( d_{a,-}\right) }\cap \mathcal{E}%
_{1}\subseteq \mathit{\Omega }_{\mathfrak{m}}^{\sharp }\cap \mathcal{E}_{1}$
because of (\ref{equallity for gap equation}). 
\end{proof}%

Therefore, since the convex and weak$^{\ast }$--compact set $\mathit{\Omega }%
_{\mathfrak{m}}^{\sharp }=\mathit{M}_{\mathfrak{m}}^{\sharp }$ is a closed
face of $E_{1}$ in this case, Proposition \ref{theorem structure etat
equilibre copy(1)} leads to an exact characterization of the set $\mathit{%
\Omega }_{\mathfrak{m}}^{\sharp }$ of generalized t.i. equilibrium states
via the closed faces $\mathit{M}_{\Phi \left( d_{a,-}\right) }$ for $%
d_{a,-}\in \mathcal{C}_{\mathfrak{m}}^{\sharp }$:

\begin{corollary}[Structure of $\mathit{\Omega }_{\mathfrak{m}}^{\sharp }$
through approximating interactions]
\label{theorem structure etat equilibre}%
\index{States!generalized equilibrium!purely attractive models}For any model 
$\mathfrak{m}\in \mathcal{M}_{1}$ such that $\Phi _{a,+}=\Phi _{a,+}^{\prime
}=0$ \emph{(a.e.)}, the closed face $\mathit{\Omega }_{\mathfrak{m}}^{\sharp
}$ is the weak$^{\ast }$--closed convex hull of 
\begin{equation*}
\underset{d_{a,-}\in \mathcal{C}_{\mathfrak{m}}^{\sharp }}{\cup }\mathit{M}%
_{\Phi \left( d_{a,-}\right) }.
\end{equation*}
\end{corollary}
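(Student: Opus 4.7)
The plan is to establish two inclusions, using that $\mathit{\Omega}_{\mathfrak{m}}^{\sharp}$ is a (non-empty) closed face of the Poulsen simplex $E_1$ in this purely attractive case, as recorded in Theorem \ref{theorem purement repulsif sympa} ($-$). Since $\mathit{\Omega}_{\mathfrak{m}}^{\sharp}$ is then convex and weak$^{\ast}$--compact, and the two inclusions below will give set-theoretic equality of the extreme points of $\mathit{\Omega}_{\mathfrak{m}}^{\sharp}$ with an explicit union, the Krein--Milman theorem will close the argument.

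First I would prove the inclusion $\bigcup_{d_{a,-}\in \mathcal{C}_{\mathfrak{m}}^{\sharp}} \mathit{M}_{\Phi(d_{a,-})} \subseteq \mathit{\Omega}_{\mathfrak{m}}^{\sharp}$. Fix $d_{a,-}\in \mathcal{C}_{\mathfrak{m}}^{\sharp}$ and $\omega\in \mathit{M}_{\Phi(d_{a,-})}$. By Proposition \ref{theorem structure etat equilibre copy(1)}~(ii) the gap equation $d_{a,-}=e_{\Phi_a}(\omega)+ie_{\Phi_a'}(\omega)$ holds (a.e.), so a direct computation using Definition \ref{definition BCS-type model approximated} yields
\begin{equation*}
\|d_{a,-}\|_2^2 + f_{\mathfrak{m}}(\omega,d_{a,-}) = -\|d_{a,-}\|_2^2 + e_\Phi(\omega) - \beta^{-1} s(\omega).
\end{equation*}
Since $d_{a,-}\in \mathcal{C}_{\mathfrak{m}}^{\sharp}$ and $\omega$ minimizes $f_{\mathfrak{m}}(\,\cdot\,,d_{a,-})$ on $E_1$, Propositions \ref{corrolaire sympa}~(i) and \ref{Bogo interaction negative} identify the r.h.s.\ with $-\mathrm{P}_{\mathfrak{m}}$. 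Because $\Phi_{a,+}=\Phi_{a,+}'=0$ (a.e.) and $\Delta_{a,-}(\omega)\geq \gamma_{a,-}|e_{\Phi_a}(\omega)+ie_{\Phi_a'}(\omega)|^2 = |d_{a,-}|^2$ (a.e.) by Remark \ref{Remark lower bound delta}, one gets $f_{\mathfrak{m}}^{\sharp}(\omega)\leq -\|d_{a,-}\|_2^2 + e_\Phi(\omega)-\beta^{-1}s(\omega) = -\mathrm{P}_{\mathfrak{m}}$, which combined with the lower bound from Theorem \ref{BCS main theorem 1}~(i) forces equality and thus $\omega \in \mathit{M}_{\mathfrak{m}}^{\sharp}=\mathit{\Omega}_{\mathfrak{m}}^{\sharp}$. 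Convexity and weak$^{\ast}$--closedness of $\mathit{\Omega}_{\mathfrak{m}}^{\sharp}$ upgrade this pointwise inclusion to $\overline{\mathrm{co}(\bigcup \mathit{M}_{\Phi(d_{a,-})})}\subseteq \mathit{\Omega}_{\mathfrak{m}}^{\sharp}$.

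For the reverse inclusion, since $\mathit{\Omega}_{\mathfrak{m}}^{\sharp}$ is a closed face of $E_1$, its extreme points are exactly $\mathit{\Omega}_{\mathfrak{m}}^{\sharp}\cap \mathcal{E}_1$. Proposition \ref{theorem structure etat equilibre copy(1)}~(i) tells us that for every such ergodic $\hat\omega$ the function $d_{a,-}:=e_{\Phi_a}(\hat\omega)+ie_{\Phi_a'}(\hat\omega)$ lies in $\mathcal{C}_{\mathfrak{m}}^{\sharp}$ with $\hat\omega\in \mathit{M}_{\Phi(d_{a,-})}$, so
\begin{equation*}
\mathit{\Omega}_{\mathfrak{m}}^{\sharp}\cap \mathcal{E}_1 \subseteq \bigcup_{d_{a,-}\in \mathcal{C}_{\mathfrak{m}}^{\sharp}} \mathit{M}_{\Phi(d_{a,-})}.
\end{equation*}
Applying the Krein--Milman theorem (Theorem \ref{theorem Krein--Millman}) to the weak$^{\ast}$--compact convex set $\mathit{\Omega}_{\mathfrak{m}}^{\sharp}$ yields $\mathit{\Omega}_{\mathfrak{m}}^{\sharp}=\overline{\mathrm{co}(\mathit{\Omega}_{\mathfrak{m}}^{\sharp}\cap \mathcal{E}_1)}\subseteq \overline{\mathrm{co}(\bigcup \mathit{M}_{\Phi(d_{a,-})})}$, completing the proof.

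The only delicate point is the first inclusion: showing that non-ergodic equilibrium states of the approximating interaction $\Phi(d_{a,-})$ still belong to $\mathit{\Omega}_{\mathfrak{m}}^{\sharp}$. The argument above bypasses any ergodic decomposition of $\omega$ within $\mathit{M}_{\Phi(d_{a,-})}$ by exploiting that the gap equation from Proposition \ref{theorem structure etat equilibre copy(1)}~(ii) holds for \emph{every} $\omega\in\mathit{M}_{\Phi(d_{a,-})}$, turning the lower bound $\Delta_{a,-}(\omega)\geq|d_{a,-}|^2$ into the missing piece that forces $f_{\mathfrak{m}}^{\sharp}(\omega)$ to attain $-\mathrm{P}_{\mathfrak{m}}$. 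Alternatively one could decompose $\omega$ as a Choquet integral on the face $\mathit{M}_{\Phi(d_{a,-})}$ over its ergodic elements (each already in $\mathit{\Omega}_{\mathfrak{m}}^{\sharp}\cap \mathcal{E}_1$ by Proposition \ref{theorem structure etat equilibre copy(1)}~(ii)) and invoke the closedness and convexity of $\mathit{\Omega}_{\mathfrak{m}}^{\sharp}$, but the variational route is shorter and more in line with the functional-analytic tools developed in the previous section.
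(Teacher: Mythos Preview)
Your proof is correct and follows essentially the same route as the paper, which simply records the corollary as a direct consequence of Proposition~\ref{theorem structure etat equilibre copy(1)} together with the fact that $\mathit{\Omega}_{\mathfrak{m}}^{\sharp}$ is a closed face. The one small variation is in the first inclusion: the paper's implicit argument uses part~(ii) only on $\mathit{M}_{\Phi(d_{a,-})}\cap\mathcal{E}_1$ and then invokes the face structure of $\mathit{M}_{\Phi(d_{a,-})}$ (Lemma~\ref{remark equilibrium state approches}) plus Krein--Milman, whereas you give a direct variational argument via $\Delta_{a,-}(\omega)\geq |d_{a,-}|^2$ that works for every $\omega\in\mathit{M}_{\Phi(d_{a,-})}$ at once---a clean shortcut, and you already identify the face-based alternative yourself.
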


\chapter{The max--min and min--max Variational Problems\label{Section
theorem saddle point}}

\setcounter{equation}{0}%
The thermodynamics of any model $\mathfrak{m}\in \mathcal{M}_{1}$ is given
on the level of the pressure by Theorem \ref{BCS main theorem 1}. This
result is not satisfactory enough because we also would like to have access
to generalized t.i. equilibrium states from local theories (cf. Definition %
\ref{definition local theory}). The additional information we need for this
purpose is Theorem \ref{theorem saddle point}. In particular, it is
necessary to relate the thermodynamics of models $\mathfrak{m}\in \mathcal{M}%
_{1}$ with their approximating interactions through the thermodynamic games
defined in Definition \ref{definition two--person zero--sum game}.

As a preliminary step of the proof of Theorem \ref{theorem saddle point}, we
need to analyze more precisely the max--min and min--max variational
problems $\mathrm{F}_{\mathfrak{m}}^{\flat }$ and $\mathrm{F}_{\mathfrak{m}%
}^{\sharp }$. This is performed in Section \ref{Section optimizations
problems} and the proof of Theorem \ref{theorem saddle point} is postponed
until Section \ref{Section theorem saddle point bis}, see Lemmata \ref{lemma
super} and \ref{lemma super copy(1)}.

\section{Analysis of the conservative values $\mathrm{F}_{\mathfrak{m}%
}^{\flat }$ and $\mathrm{F}_{\mathfrak{m}}^{\sharp }$\label{Section
optimizations problems}%
\index{Thermodynamic game!conservative values}}

We start by giving important properties of the map%
\begin{equation*}
\left( c_{a,-},c_{a,+}\right) \mapsto \mathfrak{f}_{\mathfrak{m}}\left(
c_{a,-},c_{a,+}\right) :=-\left\Vert c_{a,+}\right\Vert _{2}^{2}+\left\Vert
c_{a,-}\right\Vert _{2}^{2}-P_{\mathfrak{m}}\left( c_{a,-}+c_{a,+}\right)
\end{equation*}%
from $L_{-}^{2}(\mathcal{A},\mathbb{C})\times L_{+}^{2}(\mathcal{A},\mathbb{C%
})$ to $\mathbb{R}$, see Definition \ref{definition approximating
free--energy}.

\begin{lemma}[Approximating free--energy density $\mathfrak{f}_{\mathfrak{m}%
} $\ for $\mathfrak{m}\in \mathcal{M}_{1}$]
\label{lemma chiant property map de base}\mbox{ }\newline
\emph{(}$+$\emph{)} 
\index{Free--energy density functional!approximating}At any fixed $%
c_{a,-}\in L_{-}^{2}(\mathcal{A},\mathbb{C})$, the map $c_{a,+}\mapsto 
\mathfrak{f}_{\mathfrak{m}}\left( c_{a,-},c_{a,+}\right) $ from $L_{+}^{2}(%
\mathcal{A},\mathbb{C})$ to $\mathbb{R}$ is upper semi--continuous in the
weak topology and strictly concave ($\gamma _{a,+}\neq 0$ (a.e.)).\newline
\emph{(}$-$\emph{)} At any fixed $c_{a,+}\in L_{+}^{2}(\mathcal{A},\mathbb{C}%
)$, the map $c_{a,-}\mapsto \mathfrak{f}_{\mathfrak{m}}\left(
c_{a,-},c_{a,+}\right) $ from $L_{-}^{2}(\mathcal{A},\mathbb{C})$ to $%
\mathbb{R}$ is lower semi--continuous in the weak topology.
\end{lemma}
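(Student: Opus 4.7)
The plan is to separate the two statements $(+)$ and $(-)$ and, in each case, to read off the required semicontinuity and convexity/concavity from the structural properties of $P_{\mathfrak{m}}$ established in Proposition \ref{corrolaire sympa}, namely: $c_{a}\mapsto P_{\mathfrak{m}}(c_{a})$ is convex, Lipschitz norm--continuous, and weakly continuous on every closed ball $\mathcal{B}_{R}(0)\subseteq L^{2}(\mathcal{A},\mathbb{C})$.

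For $(+)$, fix $c_{a,-}\in L_{-}^{2}(\mathcal{A},\mathbb{C})$ and view the map in question as the sum of $-\Vert c_{a,+}\Vert _{2}^{2}$, the constant $\Vert c_{a,-}\Vert _{2}^{2}$, and $-P_{\mathfrak{m}}(c_{a,-}+c_{a,+})$. When $\gamma_{a,+}\neq 0$ (a.e.), $L_{+}^{2}(\mathcal{A},\mathbb{C})$ is a genuine Hilbert space, on which the squared norm is strictly convex, so $-\Vert c_{a,+}\Vert _{2}^{2}$ is strictly concave; by Proposition \ref{corrolaire sympa} (ii), $c_{a,+}\mapsto P_{\mathfrak{m}}(c_{a,-}+c_{a,+})$ is convex, so $-P_{\mathfrak{m}}(c_{a,-}+c_{a,+})$ is concave. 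The sum of a strictly concave and a concave function is strictly concave. For weak upper semicontinuity, the two nonconstant terms are both of the form $-(\text{convex and norm--continuous})$; by Mazur's theorem (Theorem \ref{Mazur}), convex norm--continuous maps are weakly lower semicontinuous, hence their negatives are weakly upper semicontinuous, and the sum inherits this property.

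For $(-)$, fix $c_{a,+}\in L_{+}^{2}(\mathcal{A},\mathbb{C})$. The term $c_{a,-}\mapsto \Vert c_{a,-}\Vert _{2}^{2}$ is convex and norm--continuous, hence weakly lower semicontinuous by Mazur's theorem. The delicate term is $c_{a,-}\mapsto -P_{\mathfrak{m}}(c_{a,-}+c_{a,+})$: here convexity of $P_{\mathfrak{m}}$ would only give weak \emph{upper} semicontinuity of its negative, which is the wrong direction. The fix is to use the stronger property from Proposition \ref{corrolaire sympa} (ii), namely weak continuity of $P_{\mathfrak{m}}$ on every closed ball. Given any sequence $c_{a,-}^{(n)}\to c_{a,-}$ weakly, the Banach--Steinhaus theorem forces $\sup_{n}\Vert c_{a,-}^{(n)}\Vert _{2}<\infty$, so the whole sequence lies in a fixed ball $\mathcal{B}_{R}(0)$; on that ball $P_{\mathfrak{m}}(\,\cdot +c_{a,+})$ is weakly continuous, which gives weak sequential continuity (hence lower semicontinuity) of $-P_{\mathfrak{m}}(c_{a,-}^{(n)}+c_{a,+})$. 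Adding this to the weak sequential lower semicontinuity of $\Vert c_{a,-}\Vert _{2}^{2}$ yields weak sequential lower semicontinuity of $\mathfrak{f}_{\mathfrak{m}}(\,\cdot ,c_{a,+})$.

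The main (and essentially only) subtle point is upgrading weak sequential lower semicontinuity to weak topological lower semicontinuity on the whole space $L_{-}^{2}(\mathcal{A},\mathbb{C})$. This follows by a coercivity argument: since $P_{\mathfrak{m}}$ is Lipschitz (Proposition \ref{corrolaire sympa} (ii)), the term $-P_{\mathfrak{m}}(c_{a,-}+c_{a,+})$ grows at most linearly in $\Vert c_{a,-}\Vert _{2}$, while $\Vert c_{a,-}\Vert _{2}^{2}$ grows quadratically, so any net along which $\Vert c_{a,-}^{(\alpha)}\Vert _{2}\to \infty$ forces $\mathfrak{f}_{\mathfrak{m}}(c_{a,-}^{(\alpha )},c_{a,+})\to +\infty$ and is irrelevant for the $\liminf$. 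One may therefore restrict to norm--bounded nets, on which the weak topology is metrizable (since $L_{-}^{2}(\mathcal{A},\mathbb{C})$ is separable, cf. the standing hypothesis on $(\mathcal{A},\mathfrak{A},\mathfrak{a})$ and Theorem \ref{Metrizability}), so sequential and topological weak lower semicontinuity coincide. This coercivity reduction is the only place where some care is needed; everything else is an immediate application of Proposition \ref{corrolaire sympa} (ii) and Mazur's theorem.
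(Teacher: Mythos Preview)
Your decomposition and use of Proposition \ref{corrolaire sympa} (ii) match the paper's proof exactly: both argue that $\|\cdot\|_2^2$ is weakly lower semi--continuous and strictly convex, that $P_{\mathfrak{m}}$ is convex and weakly continuous on balls, and read off $(+)$ and $(-)$ from the resulting sign combinations. Two remarks are in order.

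First, your citation of ``Mazur's theorem (Theorem \ref{Mazur})'' is wrong: in this paper Theorem \ref{Mazur} is the statement about generic uniqueness of tangent functionals to a convex continuous function, not the result that convex norm--closed sets are weakly closed. The fact you actually need --- a convex, norm--continuous real functional on a Banach space is weakly lower semi--continuous --- is standard (its sublevel sets are convex and norm--closed, hence weakly closed), but is not Theorem \ref{Mazur} here. You can simply state it directly, or invoke Proposition \ref{lemma gamma regularisation}.

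Second, for $(-)$ you go further than the paper, which simply asserts weak lower semi--continuity without commenting on the passage from ``weak continuity on balls'' to the whole of $L_-^2(\mathcal{A},\mathbb{C})$. Your coercivity observation is the right idea, but the net argument as written is loose: weakly convergent \emph{nets} in a Hilbert space need not be norm--bounded (unlike sequences, where Banach--Steinhaus applies), so ``restrict to norm--bounded nets'' is not an immediate step. The clean way to carry this out is via sublevel sets: by the Lipschitz bound on $P_{\mathfrak{m}}$ in Proposition \ref{corrolaire sympa} (ii), each set $\{c_{a,-}:\mathfrak{f}_{\mathfrak{m}}(c_{a,-},c_{a,+})\le t\}$ is norm--bounded, hence contained in some ball $\mathcal{B}_{R,-}$; on $\mathcal{B}_{R,-}$ the functional is weakly lower semi--continuous (weak continuity of $P_{\mathfrak{m}}$ there plus weak lower semi--continuity of $\|\cdot\|_2^2$), so the sublevel set is weakly closed relative to $\mathcal{B}_{R,-}$, and since $\mathcal{B}_{R,-}$ is itself weakly closed (convex and norm--closed), the sublevel set is weakly closed in $L_-^2(\mathcal{A},\mathbb{C})$.
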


\begin{proof}
The maps $c_{a,\pm }\mapsto \left\Vert c_{a,\pm }\right\Vert _{2}^{2}$ from $%
L_{\pm }^{2}(\mathcal{A},\mathbb{C})$ to $\mathbb{R}$ are lower
semi--continuous in the weak topology and, as soon as $\gamma _{a,\pm }\neq
0 $ (a.e.), strictly convex. By Proposition \ref{corrolaire sympa} (ii), the
map $c_{a}\mapsto P_{\mathfrak{m}}\left( c_{a}\right) $ is weakly continuous
on any ball $\mathcal{B}_{R}\left( 0\right) \subseteq L^{2}(\mathcal{A},%
\mathbb{C})$ of radius $R<\infty $ and convex. Therefore, the map $%
c_{a,+}\mapsto \mathfrak{f}_{\mathfrak{m}}\left( c_{a,-},c_{a,+}\right) $ is
upper semi--continuous and strictly concave if $\gamma _{a,\pm }\neq 0$
(a.e.), whereas $c_{a,-}\mapsto \mathfrak{f}_{\mathfrak{m}}\left(
c_{a,-},c_{a,+}\right) $ is lower semi--continuous.%
\end{proof}%

We continue our analysis of the conservative values $\mathrm{F}_{\mathfrak{m}%
}^{\flat }$ and $\mathrm{F}_{\mathfrak{m}}^{\sharp }$ by studying the
functionals $\mathfrak{f}_{\mathfrak{m}}^{\flat }$ and $\mathfrak{f}_{%
\mathfrak{m}}^{\sharp }$ of the thermodynamic game defined in Definition \ref%
{definition two--person zero--sum game}.

\begin{lemma}[Properties of functionals $\mathfrak{f}_{\mathfrak{m}}^{\flat
} $ and $\mathfrak{f}_{\mathfrak{m}}^{\sharp }$\ for $\mathfrak{m}\in 
\mathcal{M}_{1}$]
\label{lemma idiot interaction approx 2 copy(1)}\mbox{ }\newline
\emph{(}$\flat $\emph{)} 
\index{Thermodynamic game!worst loss functional}%
\index{Thermodynamic game!least gain functional}The map $c_{a,+}\mapsto 
\mathfrak{f}_{\mathfrak{m}}^{\flat }\left( c_{a,+}\right) $ from $L_{+}^{2}(%
\mathcal{A},\mathbb{C})$ to $\mathbb{R}$ is upper semi--continuous in the
weak topology and strictly concave ($\gamma _{a,+}\neq 0$ (a.e.)).\newline
\emph{(}$\sharp $\emph{)} The map $c_{a,-}\mapsto \mathfrak{f}_{\mathfrak{m}%
}^{\sharp }\left( c_{a,-}\right) $ from $L_{-}^{2}(\mathcal{A},\mathbb{C})$
to $\mathbb{R}$ is lower semi--continuous in the weak topology.
\end{lemma}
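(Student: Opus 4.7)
The plan is to derive both assertions directly from Lemma \ref{lemma chiant property map de base} together with elementary facts about suprema/infima of semi--continuous and convex/concave families, plus the convexity of the pressure map $c_{a}\mapsto P_{\mathfrak{m}}(c_{a})$ from Proposition \ref{corrolaire sympa} (ii).

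For assertion ($\sharp$), I would simply observe that
\begin{equation*}
\mathfrak{f}_{\mathfrak{m}}^{\sharp }(c_{a,-})=\underset{c_{a,+}\in L_{+}^{2}(\mathcal{A},\mathbb{C})}{\sup }\mathfrak{f}_{\mathfrak{m}}(c_{a,-},c_{a,+})
\end{equation*}
is a pointwise supremum over $c_{a,+}\in L_{+}^{2}(\mathcal{A},\mathbb{C})$ of the family $\{c_{a,-}\mapsto \mathfrak{f}_{\mathfrak{m}}(c_{a,-},c_{a,+})\}_{c_{a,+}}$, each element of which is weakly lower semi--continuous by Lemma \ref{lemma chiant property map de base} ($-$). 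Since the pointwise supremum of any family of weakly lower semi--continuous functionals is itself weakly lower semi--continuous, we obtain ($\sharp$) immediately.

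For the weak upper semi--continuity half of ($\flat$), I would apply the analogous (dual) observation: $\mathfrak{f}_{\mathfrak{m}}^{\flat }$ is a pointwise infimum over $c_{a,-}\in L_{-}^{2}(\mathcal{A},\mathbb{C})$ of the family $\{c_{a,+}\mapsto \mathfrak{f}_{\mathfrak{m}}(c_{a,-},c_{a,+})\}_{c_{a,-}}$, each of which is weakly upper semi--continuous by Lemma \ref{lemma chiant property map de base} ($+$). Hence $\mathfrak{f}_{\mathfrak{m}}^{\flat}$ is weakly upper semi--continuous.

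The only delicate point — which is why I single it out as the key step — is strict concavity of $\mathfrak{f}_{\mathfrak{m}}^{\flat}$, because in general an infimum of strictly concave functionals is only concave, and the strict inequality need not survive in the limit. To circumvent this I would exploit the additive structure of $\mathfrak{f}_{\mathfrak{m}}$ and write
\begin{equation*}
\mathfrak{f}_{\mathfrak{m}}^{\flat }(c_{a,+})=-\Vert c_{a,+}\Vert _{2}^{2}+\underset{c_{a,-}\in L_{-}^{2}(\mathcal{A},\mathbb{C})}{\inf }\left\{ \Vert c_{a,-}\Vert _{2}^{2}-P_{\mathfrak{m}}(c_{a,-}+c_{a,+})\right\} .
\end{equation*}
For each fixed $c_{a,-}$, the map $c_{a,+}\mapsto \Vert c_{a,-}\Vert _{2}^{2}-P_{\mathfrak{m}}(c_{a,-}+c_{a,+})$ is concave, since it is a constant minus the composition of the convex functional $P_{\mathfrak{m}}$ (Proposition \ref{corrolaire sympa} (ii)) with the affine map $c_{a,+}\mapsto c_{a,-}+c_{a,+}$. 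The pointwise infimum of concave functionals is concave, so the bracket on the right is a concave function of $c_{a,+}$. When $\gamma _{a,+}\neq 0$ (a.e.), the space $L_{+}^{2}(\mathcal{A},\mathbb{C})$ is non--trivial and $c_{a,+}\mapsto -\Vert c_{a,+}\Vert _{2}^{2}$ is strictly concave on it; adding a strictly concave functional to a concave one yields a strictly concave functional, which gives the remaining assertion of ($\flat$) and completes the proof.
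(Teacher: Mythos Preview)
Your proof is correct. For ($\sharp$) and for the strict concavity in ($\flat$) you argue essentially as the paper does (the paper also splits off the strictly concave term $-\Vert c_{a,+}\Vert_2^2$ and recognizes the remainder as a concave function of $c_{a,+}$, via the identification $\mathfrak{f}_{\mathfrak{m}}^{\flat}(c_{a,+})=-\Vert c_{a,+}\Vert_2^2-\mathrm{P}_{\mathfrak{m}(c_{a,+})}$ from Proposition~\ref{Bogo interaction negative}).

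The genuine difference is in the weak upper semi--continuity of $\mathfrak{f}_{\mathfrak{m}}^{\flat}$. You obtain it in one line as an infimum of the weakly upper semi--continuous functionals supplied by Lemma~\ref{lemma chiant property map de base} ($+$). The paper instead invokes Proposition~\ref{Bogo interaction negative} to write $\mathfrak{f}_{\mathfrak{m}}^{\flat}(c_{a,+})=-\Vert c_{a,+}\Vert_2^2-\mathrm{P}_{\mathfrak{m}(c_{a,+})}$ and then proves, via an equicontinuity argument on balls, that $c_{a,+}\mapsto \mathrm{P}_{\mathfrak{m}(c_{a,+})}$ is weakly \emph{continuous} on bounded sets. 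Your route is more elementary and fully sufficient for the lemma as stated; the paper's route costs more but yields extra information --- namely the identity (\ref{petite equality cool}) and the weak continuity of $\mathrm{P}_{\mathfrak{m}(c_{a,+})}$ --- that the paper reuses later in the proof of Lemma~\ref{lemma super}.
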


\begin{proof}
By Proposition \ref{corrolaire sympa} (ii), we first observe that there is $%
R>0$ such that 
\begin{equation*}
\mathfrak{f}_{\mathfrak{m}}^{\flat }\left( c_{a,+}\right) =\underset{%
c_{a,-}\in \mathcal{B}_{R,-}}{\inf }\mathfrak{f}_{\mathfrak{m}}\left(
c_{a,-},c_{a,+}\right) \mathrm{\quad and\quad }\mathfrak{f}_{\mathfrak{m}%
}^{\sharp }\left( c_{a,-}\right) =\underset{c_{a,+}\in \mathcal{B}_{R,+}}{%
\sup }\mathfrak{f}_{\mathfrak{m}}\left( c_{a,-},c_{a,+}\right) ,
\end{equation*}%
where $\mathcal{B}_{R,\pm }\subseteq L_{\pm }^{2}(\mathcal{A},\mathbb{C})$
are the closed balls of radius $R$ centered at $0$. In other words, $%
\mathfrak{f}_{\mathfrak{m}}^{\flat }(c_{a,+})$ and $\mathfrak{f}_{\mathfrak{m%
}}^{\sharp }(c_{a,-})$ are well--defined for any $c_{a,\pm }\in L_{\pm }^{2}(%
\mathcal{A},\mathbb{C})$.

($\flat $) From Proposition \ref{Bogo interaction negative}, there exists
also $R<\infty $ such that 
\begin{equation}
\mathrm{P}_{\mathfrak{m}\left( c_{a,+}\right) }=\underset{c_{a,-}\in 
\mathcal{B}_{R,-}}{\sup }\left\{ -\left\Vert c_{a,-}\right\Vert _{2}^{2}+P_{%
\mathfrak{m}}\left( c_{a,-}+c_{a,+}\right) \right\}  \label{pressure p 2}
\end{equation}%
is the pressure of the Fermi system%
\begin{equation}
\mathfrak{m}\left( c_{a,+}\right) :=(\Phi \left( c_{a,+}\right) ,\{\Phi
_{a,-}\}_{a\in \mathcal{A}},\{\Phi _{a,-}^{\prime }\}_{a\in \mathcal{A}})\in 
\mathcal{M}_{1}.  \label{m approche 2}
\end{equation}%
Here, $\Phi _{a,-}:=\gamma _{a,-}\Phi _{a}$ and $\Phi _{a,-}^{\prime
}:=\gamma _{a,-}\Phi _{a}^{\prime }$, whereas $\Phi \left( c_{a,+}\right)
=\Phi _{\mathfrak{m}}\left( c_{a,+}\right) $ is defined in Definition \ref%
{definition BCS-type model approximated}.

By using similar arguments as in the proof of Proposition \ref{corrolaire
sympa} (ii), one obtains that the family 
\begin{equation}
\{c_{a,+}\mapsto f_{\mathfrak{m}}\left( \rho ,c_{a,+}+c_{a,-}\right)
\}_{\rho \in E_{1},%
\text{$c_{a,-}\in $}\mathcal{B}_{R,-}}
\end{equation}%
of real linear functionals is weakly equicontinuous on the ball $\mathcal{B}%
_{R,+}$. It follows from Proposition \ref{corrolaire sympa} (i) and (\ref%
{pressure p 2}) that\ the map $c_{a,+}\mapsto \mathrm{P}_{\mathfrak{m}\left(
c_{a,+}\right) }$ is weakly continuous on the ball $\mathcal{B}_{R,+}$.
Additionally, $c_{a,+}\mapsto \left\Vert c_{a,+}\right\Vert _{2}^{2}$ is
lower semi--continuous in the weak topology. Therefore, the map%
\begin{equation}
c_{a,+}\mapsto \mathfrak{f}_{\mathfrak{m}}^{\flat }\left( c_{a,+}\right)
=-\left\Vert c_{a,+}\right\Vert _{2}^{2}-\mathrm{P}_{\mathfrak{m}\left(
c_{a,+}\right) }  \label{petite equality cool}
\end{equation}%
is upper semi--continuous in the weak topology. As soon as $\gamma
_{a,+}\neq 0$ (a.e.), the functional $\mathfrak{f}_{\mathfrak{m}}^{\flat }$
is also strictly concave: For all $\lambda \in (0,1)$ and $%
c_{a,+}^{(1)},c_{a,+}^{(2)}\in L_{+}^{2}(\mathcal{A},\mathbb{C})$ such that $%
c_{a,+}^{(1)}\neq c_{a,+}^{(2)}$ (a.e.), 
\begin{equation*}
\lambda \mathfrak{f}_{\mathfrak{m}}^{\flat }(c_{a,+}^{(1)})+(1-\lambda )%
\mathfrak{f}_{\mathfrak{m}}^{\flat }(c_{a,+}^{(2)})<\mathfrak{f}_{\mathfrak{m%
}}^{\flat }(\lambda c_{a,+}^{(1)}+(1-\lambda )c_{a,+}^{(2)}).
\end{equation*}

($\sharp $) The functional $\mathfrak{f}_{\mathfrak{m}}^{\sharp }$ is lower
semi--continuous w.r.t. the weak topology because it is the supremum of a
family 
\begin{equation*}
\{c_{a,-}\mapsto \mathfrak{f}_{\mathfrak{m}}\left( c_{a,-},c_{a,+}\right)
\}_{c_{a,+}\text{$\in L_{+}^{2}(\mathcal{A},\mathbb{C})$}}
\end{equation*}%
of lower semi--continuous functionals, see Lemma \ref{lemma chiant property
map de base} ($-$).%
\end{proof}%

For all $c_{a,\pm }\in L_{\pm }^{2}(\mathcal{A},\mathbb{C})$, we study now
the sets $\mathcal{C}_{\mathfrak{m}}^{\flat }\left( c_{a,+}\right) $ and $%
\mathcal{C}_{\mathfrak{m}}^{\sharp }\left( c_{a,-}\right) $ related to the
solutions of the variational problems $\mathfrak{f}_{\mathfrak{m}}^{\flat }$
and $\mathfrak{f}_{\mathfrak{m}}^{\sharp }$ and defined by (\ref{eq conserve
strategybis}).

\begin{lemma}[Solutions of variational problems $\mathfrak{f}_{\mathfrak{m}%
}^{\flat }$ and $\mathfrak{f}_{\mathfrak{m}}^{\sharp }$]
\label{lemma idiot interaction approx 2 copy(2)}\mbox{ }\newline
\emph{(}$\flat $\emph{)} For all $c_{a,+}\in L_{+}^{2}(\mathcal{A},\mathbb{C}%
)$, the set $\mathcal{C}_{\mathfrak{m}}^{\flat }\left( c_{a,+}\right) $ is
non--empty, norm--bounded and weakly compact.\newline
\emph{(}$\sharp $\emph{)} If $\gamma _{a,+}\neq 0$ (a.e.) then, for all $%
c_{a,-}\in L_{-}^{2}(\mathcal{A},\mathbb{C})$, the set $\mathcal{C}_{%
\mathfrak{m}}^{\sharp }\left( c_{a,-}\right) $ has exactly one element $%
\mathrm{r}_{+}(c_{a,-})$.%
\index{Thermodynamic game!decision rule}
\end{lemma}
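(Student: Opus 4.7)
The plan is to deduce both statements from a coercivity estimate coming from the Lipschitz bound on $c_{a}\mapsto P_{\mathfrak{m}}(c_{a})$ given by Proposition \ref{corrolaire sympa} (ii), combined with the semi--continuity and concavity information gathered in Lemma \ref{lemma chiant property map de base} and the Banach--Alaoglu theorem. Throughout I will use the definition
\begin{equation*}
\mathfrak{f}_{\mathfrak{m}}\left( c_{a,-},c_{a,+}\right) =-\Vert c_{a,+}\Vert _{2}^{2}+\Vert c_{a,-}\Vert _{2}^{2}-P_{\mathfrak{m}}\left( c_{a,-}+c_{a,+}\right) ,
\end{equation*}
and the fact that, by Proposition \ref{corrolaire sympa} (ii), the map $c_{a}\mapsto P_{\mathfrak{m}}(c_{a})$ is globally Lipschitz with constant $C:=2(\Vert \Phi _{a}\Vert _{2}+\Vert \Phi _{a}^{\prime }\Vert _{2})$.

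For $(\flat )$, fix $c_{a,+}\in L_{+}^{2}(\mathcal{A},\mathbb{C})$. Using $P_{\mathfrak{m}}(c_{a,-}+c_{a,+})\leq P_{\mathfrak{m}}(c_{a,+})+C\Vert c_{a,-}\Vert _{2}$, we get the coercivity estimate
\begin{equation*}
\mathfrak{f}_{\mathfrak{m}}(c_{a,-},c_{a,+})\geq \Vert c_{a,-}\Vert _{2}^{2}-C\Vert c_{a,-}\Vert _{2}-\Vert c_{a,+}\Vert _{2}^{2}-P_{\mathfrak{m}}(c_{a,+}),
\end{equation*}
which diverges to $+\infty $ as $\Vert c_{a,-}\Vert _{2}\rightarrow \infty $. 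Hence there is $R=R(c_{a,+})<\infty $ such that the infimum defining $\mathfrak{f}_{\mathfrak{m}}^{\flat }(c_{a,+})$ coincides with the infimum over the closed ball $\mathcal{B}_{R,-}\subseteq L_{-}^{2}(\mathcal{A},\mathbb{C})$. This ball is weakly compact by Banach--Alaoglu, and the functional $c_{a,-}\mapsto \mathfrak{f}_{\mathfrak{m}}(c_{a,-},c_{a,+})$ is weakly lower semi--continuous by Lemma \ref{lemma chiant property map de base} $(-)$, so it attains its infimum on $\mathcal{B}_{R,-}$. Thus $\mathcal{C}_{\mathfrak{m}}^{\flat }(c_{a,+})\neq \emptyset $, and this set is norm--bounded (included in $\mathcal{B}_{R,-}$) and weakly closed as a sublevel set of a weakly lower semi--continuous functional, hence weakly compact.

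For $(\sharp )$, the argument is symmetric and yields first existence. Fix $c_{a,-}\in L_{-}^{2}(\mathcal{A},\mathbb{C})$. From $P_{\mathfrak{m}}(c_{a,-}+c_{a,+})\geq P_{\mathfrak{m}}(c_{a,-})-C\Vert c_{a,+}\Vert _{2}$ we obtain
\begin{equation*}
\mathfrak{f}_{\mathfrak{m}}(c_{a,-},c_{a,+})\leq -\Vert c_{a,+}\Vert _{2}^{2}+C\Vert c_{a,+}\Vert _{2}+\Vert c_{a,-}\Vert _{2}^{2}-P_{\mathfrak{m}}(c_{a,-}),
\end{equation*}
which tends to $-\infty $ as $\Vert c_{a,+}\Vert _{2}\rightarrow \infty $. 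The supremum defining $\mathfrak{f}_{\mathfrak{m}}^{\sharp }(c_{a,-})$ can therefore be restricted to a closed ball $\mathcal{B}_{R,+}\subseteq L_{+}^{2}(\mathcal{A},\mathbb{C})$ of sufficiently large radius. Banach--Alaoglu together with the weak upper semi--continuity of $c_{a,+}\mapsto \mathfrak{f}_{\mathfrak{m}}(c_{a,-},c_{a,+})$, provided by Lemma \ref{lemma chiant property map de base} $(+)$, forces the supremum to be attained, giving $\mathcal{C}_{\mathfrak{m}}^{\sharp }(c_{a,-})\neq \emptyset $. When $\gamma _{a,+}\neq 0$ (a.e.) the same lemma gives strict concavity of this map: if $d_{a,+}^{(1)}\neq d_{a,+}^{(2)}$ were two distinct maximizers, the value at the convex combination $\lambda d_{a,+}^{(1)}+(1-\lambda )d_{a,+}^{(2)}$ for any $\lambda \in (0,1)$ would strictly exceed $\mathfrak{f}_{\mathfrak{m}}^{\sharp }(c_{a,-})$, a contradiction. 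Uniqueness of the maximizer $\mathrm{r}_{+}(c_{a,-})$ follows.

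There is no real obstacle: all the heavy lifting has been done in Proposition \ref{corrolaire sympa} and Lemma \ref{lemma chiant property map de base}, and the remaining work is the standard coercivity--plus--weak compactness reduction above. The only small technical point is extracting an explicit Lipschitz constant in the $L^{2}$--norm of $c_{a,\pm }$ from the global Lipschitz estimate on $P_{\mathfrak{m}}$, but this is immediate from Proposition \ref{corrolaire sympa} (ii).
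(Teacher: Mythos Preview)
Your proof is correct and follows essentially the same approach as the paper: both use the Lipschitz bound from Proposition \ref{corrolaire sympa} (ii) to confine the optimizers to a norm ball, invoke Banach--Alaoglu for weak compactness, and then apply the semi--continuity and strict concavity from Lemma \ref{lemma chiant property map de base}. The only cosmetic differences are that you write out the coercivity inequality explicitly and argue via the direct method (a lower semi--continuous function on a compact set attains its infimum, and the minimizer set is a closed sublevel set), whereas the paper extracts weakly convergent subsequences after noting metrizability of the ball; your phrasing is in fact slightly cleaner since it sidesteps the metrizability remark.
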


\begin{proof}
Fix $c_{a,\pm }\in L_{\pm }^{2}(\mathcal{A},\mathbb{C})$. From Proposition %
\ref{corrolaire sympa} (ii), there is $R<\infty $ such that $\mathcal{C}_{%
\mathfrak{m}}^{\flat }\left( c_{a,+}\right) \subseteq \mathcal{B}_{R,-}$ and 
$\mathcal{C}_{\mathfrak{m}}^{\sharp }\left( c_{a,-}\right) \subseteq 
\mathcal{B}_{R,+}$ with $\mathcal{B}_{R,\pm }\subseteq L_{\pm }^{2}(\mathcal{%
A},\mathbb{C})$ being the closed balls of radius $R$ centered at $0$.

($\flat $) We first observe that, by the separability assumption on the
measure space $(\mathcal{A},\mathfrak{a})$, the weak topology of any weakly
compact set is metrizable, by Theorem \ref{Metrizability}. Therefore,\
since, by Banach--Alaoglu theorem, balls $\mathcal{B}_{R,-}$ are weakly
compact, they are metrizable and we can restrict ourself on sequences
instead of more general nets. Take now any sequence $\{c_{a,-}^{(n)}%
\}_{n=1}^{\infty }$ of approximating minimizers in $\mathcal{B}_{R,-}$ such
that%
\begin{equation*}
\mathfrak{f}_{\mathfrak{m}}^{\flat }(c_{a,+})=\underset{n\rightarrow \infty }%
{\lim }\mathfrak{f}_{\mathfrak{m}}(c_{a,-}^{(n)},c_{a,+}).
\end{equation*}%
By compactness and metrizability of balls $\mathcal{B}_{R,-}$ in the weak
topology, we can assume without loss of generality that $\{c_{a,-}^{(n)}%
\}_{n=1}^{\infty }$ converges weakly towards $d_{a,-}\in \mathcal{B}_{R,-}$.

The map $c_{a,-}\mapsto \mathfrak{f}_{\mathfrak{m}}\left(
c_{a,-},c_{a,+}\right) $ is lower semi--continuous in the weak topology, see
Lemma \ref{lemma chiant property map de base} ($-$). It follows that 
\begin{equation*}
\mathfrak{f}_{\mathfrak{m}}^{\flat }\left( c_{a,+}\right) =\mathfrak{f}_{%
\mathfrak{m}}\left( d_{a,-},c_{a,+}\right) .
\end{equation*}%
In other words, for all $c_{a,+}\in L_{+}^{2}(\mathcal{A},\mathbb{C})$, the
set $\mathcal{C}_{\mathfrak{m}}^{\flat }\left( c_{a,+}\right) \subseteq 
\mathcal{B}_{R,-}$ is non--empty and norm--bounded. Again by weakly lower
semi--continuity of the map $c_{a,-}\mapsto \mathfrak{f}_{\mathfrak{m}%
}\left( c_{a,-},c_{a,+}\right) $, for any sequence $\{d_{a,-}^{(n)}%
\}_{n=1}^{\infty }$ in $\mathcal{C}_{\mathfrak{m}}^{\flat }\left(
c_{a,+}\right) $ converging weakly towards $d_{a}^{(\infty )}\in L_{-}^{2}(%
\mathcal{A},\mathbb{C})$ as $n\rightarrow \infty $, it is also clear that $%
d_{a}^{(\infty )}\in \mathcal{C}_{\mathfrak{m}}^{\flat }\left(
c_{a,+}\right) $ is weakly compact. Thus $\mathcal{C}_{\mathfrak{m}}^{\flat
}\left( c_{a,+}\right) $ is weakly compact because it is a weakly closed
subset of a weakly compact set.

($\sharp $) Similarly as in ($\flat $), the set $\mathcal{C}_{\mathfrak{m}%
}^{\sharp }\left( c_{a,-}\right) \subseteq \mathcal{B}_{R,+}$ is non--empty
because the map $c_{a,+}\mapsto \mathfrak{f}_{\mathfrak{m}}\left(
c_{a,-},c_{a,+}\right) $ is upper semi--continuous in the weak topology, by
Lemma \ref{lemma chiant property map de base} ($+$). The uniqueness of $%
\mathrm{r}_{+}(c_{a,-})$ in the $L_{+}^{2}(\mathcal{A},\mathbb{C})$--sense
for any fixed $c_{a,-}\in L_{-}^{2}(\mathcal{A},\mathbb{C})$ follows from
the strict concavity of the functional $c_{a,+}\mapsto \mathfrak{f}_{%
\mathfrak{m}}\left( c_{a,-},c_{a,+}\right) $, see again Lemma \ref{lemma
chiant property map de base} ($+$). 
\end{proof}%

Then we conclude the analysis of the two optimization problems $\mathrm{F}_{%
\mathfrak{m}}^{\flat }$ and $\mathrm{F}_{\mathfrak{m}}^{\sharp }$ of the
thermodynamic game defined in Definition \ref{definition two--person
zero--sum game} with a study of their sets $\mathcal{C}_{\mathfrak{m}%
}^{\flat }$ and $\mathcal{C}_{\mathfrak{m}}^{\sharp }$ of conservative
strategies, see (\ref{eq conserve strategy}).

\begin{lemma}[The set of optimizers for $\mathfrak{m}\in \mathcal{M}_{1}$]
\label{lemma idiot interaction approx 2}\mbox{ }\newline
\emph{(}$\flat $\emph{)} 
\index{Thermodynamic game!conservative strategies}If $\gamma _{a,+}\neq 0$
(a.e.), the set $\mathcal{C}_{\mathfrak{m}}^{\flat }\subseteq L_{+}^{2}(%
\mathcal{A},\mathbb{C})$ has exactly one element $d_{a,+}$.\newline
\emph{(}$\sharp $\emph{)} The set $\mathcal{C}_{\mathfrak{m}}^{\sharp
}\subseteq L_{-}^{2}(\mathcal{A},\mathbb{C})$ is non--empty, norm--bounded,
and weakly compact.
\end{lemma}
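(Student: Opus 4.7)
The plan is to prove both assertions by a standard coercivity-compactness argument, relying on two key inputs already established: the weak semi-continuity and (strict) convexity/concavity properties of the functionals $\mathfrak{f}_{\mathfrak{m}}^{\flat}$ and $\mathfrak{f}_{\mathfrak{m}}^{\sharp}$ from Lemma \ref{lemma idiot interaction approx 2 copy(1)}, and the Lipschitz continuity of the map $c_{a}\mapsto P_{\mathfrak{m}}(c_{a})$ from Proposition \ref{corrolaire sympa}(ii). The latter will be used to prove coercivity, allowing us to localize each optimization to a weakly compact ball $\mathcal{B}_{R,\pm}\subseteq L_{\pm}^{2}(\mathcal{A},\mathbb{C})$. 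Recall that such balls are weakly compact by the Banach--Alaoglu theorem and, by separability of $(\mathcal{A},\mathfrak{A},\mathfrak{a})$, the weak topology is metrizable on them (Theorem \ref{Metrizability}), so one may argue with sequences.

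For assertion ($\sharp$), I would first establish coercivity of $\mathfrak{f}_{\mathfrak{m}}^{\sharp}$. Using the crude lower bound $\mathfrak{f}_{\mathfrak{m}}^{\sharp}(c_{a,-})\geq \mathfrak{f}_{\mathfrak{m}}(c_{a,-},0)=\|c_{a,-}\|_{2}^{2}-P_{\mathfrak{m}}(c_{a,-})$ together with the Lipschitz estimate $|P_{\mathfrak{m}}(c_{a,-})-P_{\mathfrak{m}}(0)|\leq 2(\|\Phi_{a}\|_{2}+\|\Phi_{a}'\|_{2})\|c_{a,-}\|_{2}$, the quadratic term dominates, so $\mathfrak{f}_{\mathfrak{m}}^{\sharp}(c_{a,-})\to +\infty$ as $\|c_{a,-}\|_{2}\to\infty$. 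Hence there is $R<\infty$ with $\mathcal{C}_{\mathfrak{m}}^{\sharp}\subseteq\mathcal{B}_{R,-}$, yielding norm--boundedness. Any minimizing sequence lies in $\mathcal{B}_{R,-}$ and admits, by weak metrizable compactness, a weakly convergent subsequence whose limit is a minimizer by the weak lower semi--continuity of $\mathfrak{f}_{\mathfrak{m}}^{\sharp}$ (Lemma \ref{lemma idiot interaction approx 2 copy(1)} ($\sharp$)); thus $\mathcal{C}_{\mathfrak{m}}^{\sharp}\neq\emptyset$. Finally, $\mathcal{C}_{\mathfrak{m}}^{\sharp}$ is a weakly closed subset of a weakly compact set (again by weak lower semi--continuity), hence itself weakly compact.

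For assertion ($\flat$), the same pattern applies with a twist. Coercivity of $\mathfrak{f}_{\mathfrak{m}}^{\flat}$ (in the direction $-\infty$) is established via the identity \eqref{petite equality cool}, namely $\mathfrak{f}_{\mathfrak{m}}^{\flat}(c_{a,+})=-\|c_{a,+}\|_{2}^{2}-\mathrm{P}_{\mathfrak{m}(c_{a,+})}$, combined with Theorem \ref{BCS main theorem 1}(ii): since $\mathfrak{m}(c_{a,+})$ depends on $c_{a,+}$ with $\|\Phi(c_{a,+})\|_{\mathcal{W}_{1}}$ growing at most linearly in $\|c_{a,+}\|_{2}$, the pressure $\mathrm{P}_{\mathfrak{m}(c_{a,+})}$ grows at most linearly as well, so $\mathfrak{f}_{\mathfrak{m}}^{\flat}(c_{a,+})\to -\infty$ as $\|c_{a,+}\|_{2}\to\infty$. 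This reduces the supremum to a weakly compact ball $\mathcal{B}_{R,+}$, and the weak upper semi--continuity of $\mathfrak{f}_{\mathfrak{m}}^{\flat}$ (Lemma \ref{lemma idiot interaction approx 2 copy(1)} ($\flat$)) ensures attainment, hence $\mathcal{C}_{\mathfrak{m}}^{\flat}\neq\emptyset$. Uniqueness is then an immediate consequence of the strict concavity of $\mathfrak{f}_{\mathfrak{m}}^{\flat}$ valid under $\gamma_{a,+}\neq 0$ (a.e.) (same lemma): two distinct maximizers $d_{a,+}^{(1)}\neq d_{a,+}^{(2)}$ would produce a midpoint with strictly greater value, contradicting optimality.

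The main obstacle is, in my view, genuinely minimal here: the substantive analytical work has already been absorbed into Proposition \ref{corrolaire sympa} and Lemma \ref{lemma idiot interaction approx 2 copy(1)}. The one step requiring slight care is the coercivity argument for ($\flat$), where one must recognize that even though $\mathfrak{f}_{\mathfrak{m}}^{\flat}$ is defined via an infimum over $c_{a,-}$, the representation \eqref{petite equality cool} collapses this inner problem into the pressure of a model in $\mathcal{M}_{1}$ to which the uniform Lipschitz bound of Theorem \ref{BCS main theorem 1}(ii) directly applies; without this reformulation, it would be awkward to control the linear growth of the inner infimum against the outer quadratic.
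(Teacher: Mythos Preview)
Your proof is correct and follows essentially the same approach as the paper: coercivity from Proposition~\ref{corrolaire sympa}(ii) reduces both optimizations to weakly compact balls, after which Lemma~\ref{lemma idiot interaction approx 2 copy(1)} supplies the semi--continuity (for existence) and strict concavity (for uniqueness in~($\flat$)).

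One minor point on your coercivity argument for~($\flat$): invoking Theorem~\ref{BCS main theorem 1}(ii) is slightly awkward, because that theorem only asserts \emph{local} Lipschitz continuity of $\mathfrak{m}\mapsto\mathrm{P}_{\mathfrak{m}}^{\sharp}$, and the Lipschitz constant (cf.\ Lemma~\ref{lemma reduction df}) depends on $\|\mathfrak{m}\|_{\mathcal{M}_1}$; a naive reading would give quadratic, not linear, growth of $\mathrm{P}_{\mathfrak{m}(c_{a,+})}$ in $\|c_{a,+}\|_2$, which could compete with $-\|c_{a,+}\|_2^2$. The paper's route is more direct and symmetric with your~($\sharp$) argument: simply bound $\mathfrak{f}_{\mathfrak{m}}^{\flat}(c_{a,+})\leq\mathfrak{f}_{\mathfrak{m}}(0,c_{a,+})=-\|c_{a,+}\|_{2}^{2}-P_{\mathfrak{m}}(c_{a,+})$ by taking $c_{a,-}=0$ in the infimum, and then apply the \emph{global} Lipschitz estimate of Proposition~\ref{corrolaire sympa}(ii) on $P_{\mathfrak{m}}$ itself. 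This avoids the detour through \eqref{petite equality cool} and any issue with the local Lipschitz constant.
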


\begin{proof}
From Proposition \ref{corrolaire sympa} (ii), there is $R<\infty $ such that 
$\mathcal{C}_{\mathfrak{m}}^{\flat }\subseteq \mathcal{B}_{R,+}$ and $%
\mathcal{C}_{\mathfrak{m}}^{\sharp }\subseteq \mathcal{B}_{R,-}$ with $%
\mathcal{B}_{R,\pm }\subseteq L_{\pm }^{2}(\mathcal{A},\mathbb{C})$ being
the closed balls of radius $R$ centered at $0$. In particular, $-\infty <%
\mathrm{F}_{\mathfrak{m}}^{\flat }\leq \mathrm{F}_{\mathfrak{m}}^{\sharp
}<\infty $.

($\flat $) From\ Lemma \ref{lemma idiot interaction approx 2 copy(1)} ($%
\flat $), $\mathrm{F}_{\mathfrak{m}}^{\flat }$ is a supremum of a weakly
upper semi--continuous functional $\mathfrak{f}_{\mathfrak{m}}^{\flat }$ and 
$\mathcal{C}_{\mathfrak{m}}^{\flat }$ is the set of its maximizers.
Therefore, in the same way we prove ($\flat $) in Lemma \ref{lemma idiot
interaction approx 2 copy(2)}, $\mathcal{C}_{\mathfrak{m}}^{\flat }\subseteq 
\mathcal{B}_{R,+}$ is non--empty and weakly compact. Moreover, Lemma \ref%
{lemma idiot interaction approx 2 copy(1)} ($\flat $) also tells us that $%
\mathfrak{f}_{\mathfrak{m}}^{\flat }$ is strictly concave as soon as $\gamma
_{a,+}\neq 0$ (a.e.). Therefore, there is actually a unique solution $%
d_{a,+}\in L_{+}^{2}(\mathcal{A},\mathbb{C})$ of the variational problem 
\begin{equation*}
\mathrm{F}_{\mathfrak{m}}^{\flat }:=\underset{c_{a,+}\in L_{+}^{2}(\mathcal{A%
},\mathbb{C})}{\sup }\mathfrak{f}_{\mathfrak{m}}^{\flat }\left(
c_{a,+}\right) .
\end{equation*}

($\sharp $) To prove the second statement, we use similar arguments as in ($%
\flat $). Indeed, one uses Lemma \ref{lemma idiot interaction approx 2
copy(1)} ($\sharp $). Observe, however, that $\mathfrak{f}_{\mathfrak{m}%
}^{\sharp }$ is not strictly convex and so, the solution of the variational
problem%
\begin{equation*}
\mathrm{F}_{\mathfrak{m}}^{\sharp }:=\underset{c_{a,-}\in L_{-}^{2}(\mathcal{%
A},\mathbb{C})}{\inf }\mathfrak{f}_{\mathfrak{m}}^{\sharp }\left(
c_{a,-}\right)
\end{equation*}%
may not be unique. 
\end{proof}%

\section{$\mathrm{F}_{\mathfrak{m}}^{\flat }$ and $\mathrm{F}_{\mathfrak{m}%
}^{\sharp }$ as variational problems over states\label{Section theorem
saddle point bis}%
\index{Thermodynamic game!conservative values}}

Theorem \ref{theorem saddle point} ($\flat $), i.e., 
\begin{equation*}
\mathrm{P}_{\mathfrak{m}}^{\flat }:=-\inf\limits_{\rho \in E_{1}}f_{%
\mathfrak{m}}^{\flat }(\rho )=-\mathrm{F}_{\mathfrak{m}}^{\flat },
\end{equation*}%
follows from Lemma \ref{eq idiot sympa} together with von Neumann min--max
theorem (Theorem \ref{theorem minmax von Neumann}) which also give us
additional information about the non--empty set%
\index{Minimizers} 
\begin{equation}
\mathit{M}_{\mathfrak{m}}^{\flat }:=\left\{ \varrho \in E_{1}:\quad f_{%
\mathfrak{m}}^{\flat }\left( \varrho \right) =\inf\limits_{\rho \in
E_{1}}\,f_{\mathfrak{m}}^{\flat }(\rho )\right\}
\label{set of minimizers de f bemol}
\end{equation}%
of t.i. minimizers of the weak$^{\ast }$--lower semi--continuous convex
functional $f_{\mathfrak{m}}^{\flat }$ (\ref{convex functional g_m}). This
is proven in the next lemma.

\begin{lemma}[$\mathrm{F}_{\mathfrak{m}}^{\flat }$ and gap equations]
\label{lemma super}\mbox{ }\newline
\index{Gap equations}%
\index{Thermodynamic game!conservative strategies}For any $\mathfrak{m}\in 
\mathcal{M}_{1}$, $\mathrm{P}_{\mathfrak{m}}^{\flat }=-\mathrm{F}_{\mathfrak{%
m}}^{\flat }$ and there is $\omega \in \mathit{\Omega }_{\mathfrak{m}%
(d_{a,+})}^{\sharp }\cap \mathit{M}_{\mathfrak{m}}^{\flat }$ satisfying 
\begin{equation}
d_{a,+}=\gamma _{a,+}(e_{\Phi _{a}}(\omega )+ie_{\Phi _{a}^{\prime }}(\omega
))\mathrm{\ (a.e.)}  \label{gap equation1}
\end{equation}%
with $d_{a,+}\in \mathcal{C}_{\mathfrak{m}}^{\flat }$ and $\mathit{\Omega }_{%
\mathfrak{m}(d_{a,+})}^{\sharp }=\mathit{M}_{\mathfrak{m}(d_{a,+})}^{\sharp
} $ being the set of generalized t.i. equilibrium states of the model $%
\mathfrak{m}(d_{a,+})\in \mathcal{M}_{1}$ with purely attractive long--range
interactions defined by (\ref{m approche 2}). Compare with Proposition \ref%
{theorem structure etat equilibre copy(1)} and Corollary \ref{theorem
structure etat equilibre}.
\end{lemma}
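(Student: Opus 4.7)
The plan is to realize $\inf_{E_1} f_{\mathfrak{m}}^{\flat}$ and $\mathrm{F}_{\mathfrak{m}}^{\flat}$ as the two sides of a min--max equality. Via Lemma \ref{eq idiot sympa} applied to the $+$--component, I would rewrite
\begin{equation*}
f_{\mathfrak{m}}^{\flat }(\rho )=\sup\limits_{c_{a,+}\in L_{+}^{2}(\mathcal{A},\mathbb{C})}\Psi (\rho ,c_{a,+}),
\end{equation*}
where
\begin{equation*}
\Psi (\rho ,c_{a,+}):=-\Vert c_{a,+}\Vert _{2}^{2}+2\func{Re}\{\langle e_{\Phi _{a}}(\rho )+ie_{\Phi _{a}^{\prime }}(\rho ),c_{a,+}\rangle \}-\Vert \Delta _{a,-}(\rho )\Vert _{1}+e_{\Phi }(\rho )-\beta ^{-1}s(\rho ).
\end{equation*}
The key observation is that $\Psi (\rho ,c_{a,+})+\Vert c_{a,+}\Vert _{2}^{2}=f_{\mathfrak{m}(c_{a,+})}^{\sharp }(\rho )$, the purely attractive model of (\ref{m approche 2}). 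Thus, by Theorem \ref{BCS main theorem 1} and Proposition \ref{Bogo interaction negative}, one has $\inf_{\rho \in E_{1}}\Psi (\rho ,c_{a,+})=-\Vert c_{a,+}\Vert _{2}^{2}-\mathrm{P}_{\mathfrak{m}(c_{a,+})}^{\sharp }=\mathfrak{f}_{\mathfrak{m}}^{\flat }(c_{a,+})$. By Proposition \ref{corrolaire sympa} (ii) together with Lemma \ref{lemma idiot interaction approx 2 copy(2)} ($\flat $) and Lemma \ref{lemma idiot interaction approx 2} ($\flat $), both the sup in the representation of $f_{\mathfrak{m}}^{\flat }$ and the sup defining $\mathrm{F}_{\mathfrak{m}}^{\flat }$ can be restricted to a common weakly closed ball $\mathcal{B}_{R,+}\subseteq L_{+}^{2}(\mathcal{A},\mathbb{C})$ of some sufficiently large radius $R<\infty $.

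Next, I would apply von Neumann's min--max theorem (Theorem \ref{theorem minmax von Neumann}) to $\Psi $ on the product $E_{1}\times \mathcal{B}_{R,+}$. The sets are convex and compact ($E_{1}$ in the weak$^{\ast }$--topology, $\mathcal{B}_{R,+}$ in the weak topology by Banach--Alaoglu and reflexivity of the Hilbert space). For fixed $c_{a,+}$, the map $\rho \mapsto \Psi (\rho ,c_{a,+})$ is convex (affinity of $e_{\Phi }$ and of the linear scalar product term, affinity of $\rho \mapsto \Vert \Delta _{a,-}(\rho )\Vert _{1}$ by Lemma \ref{delta reprentation integral}, and convexity of $-\beta ^{-1}s$ by Lemma \ref{lemma property entropy}) and weak$^{\ast }$--lower semi--continuous (the integral over $\mathcal{A}$ in the scalar product being controlled by dominated convergence through the bounds (\ref{eq:enpersite bounded}) and the condition $\mathfrak{m}\in \mathcal{M}_{1}$, together with upper semi--continuity of $s$ and $\Vert \Delta _{a,-}\Vert _{1}$). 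For fixed $\rho $, the map $c_{a,+}\mapsto \Psi (\rho ,c_{a,+})$ is strictly concave and weakly upper semi--continuous. This yields
\begin{equation*}
-\mathrm{P}_{\mathfrak{m}}^{\flat }=\inf\limits_{\rho \in E_{1}}\sup\limits_{c_{a,+}\in \mathcal{B}_{R,+}}\Psi (\rho ,c_{a,+})=\sup\limits_{c_{a,+}\in \mathcal{B}_{R,+}}\inf\limits_{\rho \in E_{1}}\Psi (\rho ,c_{a,+})=\mathrm{F}_{\mathfrak{m}}^{\flat },
\end{equation*}
proving the first assertion, together with the existence of a saddle point $(\omega ,d_{a,+})\in E_{1}\times \mathcal{B}_{R,+}$.

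From the saddle--point conditions I would then extract the remaining claims in a single stroke. First, $\omega $ minimizes the functional $\Psi (\cdot ,d_{a,+})=f_{\mathfrak{m}(d_{a,+})}^{\sharp }-\Vert d_{a,+}\Vert _{2}^{2}$ over $E_{1}$, so $\omega \in \mathit{M}_{\mathfrak{m}(d_{a,+})}^{\sharp }=\mathit{\Omega }_{\mathfrak{m}(d_{a,+})}^{\sharp }$ by Theorem \ref{theorem purement repulsif sympa} ($-$) since $\mathfrak{m}(d_{a,+})$ is purely attractive. Second, $d_{a,+}$ maximizes the strictly concave map $c_{a,+}\mapsto \Psi (\omega ,c_{a,+})$; by the uniqueness assertion of Lemma \ref{eq idiot sympa} this forces the gap equation (\ref{gap equation1}). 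Third, the chain of equalities $f_{\mathfrak{m}}^{\flat }(\omega )=\sup_{c_{a,+}}\Psi (\omega ,c_{a,+})=\Psi (\omega ,d_{a,+})=\inf_{\rho }\sup_{c_{a,+}}\Psi (\rho ,c_{a,+})=\inf_{\rho }f_{\mathfrak{m}}^{\flat }(\rho )$ yields $\omega \in \mathit{M}_{\mathfrak{m}}^{\flat }$. That $d_{a,+}$ is the (unique) element of $\mathcal{C}_{\mathfrak{m}}^{\flat }$ follows from the identity $\Psi (\omega ,d_{a,+})=\mathfrak{f}_{\mathfrak{m}}^{\flat }(d_{a,+})=\mathrm{F}_{\mathfrak{m}}^{\flat }$ and Lemma \ref{lemma idiot interaction approx 2} ($\flat $); the trivial degenerate case $\gamma _{a,+}=0$ (a.e.) is handled separately, where $L_{+}^{2}$ reduces to $\{0\}$, the gap equation is tautological, and $f_{\mathfrak{m}}^{\flat }=f_{\mathfrak{m}}^{\sharp }$.

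The principal obstacle I anticipate is the rigorous verification of the weak$^{\ast }$--lower semi--continuity of $\Psi (\cdot ,c_{a,+})$ on $E_{1}$, because the scalar product term is an integral over $\mathcal{A}$ of quantities that are only continuous pointwise in $a$; extracting the proper uniform bound (via (\ref{eq:enpersite bounded}) and $\mathfrak{m}\in \mathcal{M}_{1}$) so that dominated convergence applies is the crucial technical step. A secondary point, necessary for the concrete reduction to $\mathcal{B}_{R,+}$ before invoking the min--max theorem, is the a priori control of the maximizer $d_{a,+}$, which is furnished by Lemma \ref{lemma idiot interaction approx 2 copy(2)} ($\sharp $) and Proposition \ref{corrolaire sympa} (ii).
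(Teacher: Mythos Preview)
Your proposal is correct and follows essentially the same route as the paper: rewrite $f_{\mathfrak{m}}^{\flat}$ via Lemma \ref{eq idiot sympa} as an $\inf\sup$ of the functional you call $\Psi$, identify the $\sup\inf$ side with $\mathrm{F}_{\mathfrak{m}}^{\flat}$ through (\ref{petite equality cool}) and Proposition \ref{Bogo interaction negative}, and then invoke von Neumann's min--max theorem on $E_{1}\times\mathcal{B}_{R,+}$ to obtain both the equality $\mathrm{P}_{\mathfrak{m}}^{\flat}=-\mathrm{F}_{\mathfrak{m}}^{\flat}$ and the saddle point $(\omega,d_{a,+})$ with the stated properties. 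Your extraction of the three consequences from the saddle--point conditions is in fact more explicit than the paper's one--line conclusion; the only minor slip is the citation of Lemma \ref{lemma idiot interaction approx 2 copy(2)} ($\sharp$) for the a priori bound on $d_{a,+}$, where Proposition \ref{corrolaire sympa} (ii) (or Lemma \ref{lemma idiot interaction approx 2} ($\flat$)) is the relevant reference.
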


\begin{proof}
On the one hand, by using Lemma \ref{eq idiot sympa}, observe that%
\begin{eqnarray}
\inf\limits_{\rho \in E_{1}}f_{\mathfrak{m}}^{\flat }\left( \rho \right)
&=&\inf\limits_{\rho \in E_{1}}\left\{ \underset{c_{a,+}\in \mathcal{B}_{R,+}%
}{\sup }\left\{ -\left\Vert c_{a,+}\right\Vert _{2}^{2}+2%
\func{Re}\left\{ \left\langle e_{\Phi _{a}}\left( \rho \right) +ie_{\Phi
_{a}^{\prime }}\left( \rho \right) ,c_{a,+}\right\rangle \right\} \right.
\right.  \notag \\
&&\left. \underset{\underset{}{}}{}\left. -\left\Vert \Delta _{a,-}\left(
\rho \right) \right\Vert _{1}+e_{\Phi }(\rho )-\beta ^{-1}s(\rho )\right\}
\right\}  \label{gap eq 1}
\end{eqnarray}%
with $\mathcal{B}_{R,+}\subseteq L_{+}^{2}(\mathcal{A},\mathbb{C})$ being a
closed ball of sufficiently large radius $R>0$ centered at $0$.

On the other hand, the set $\mathcal{C}_{\mathfrak{m}}^{\flat }\subseteq 
\mathcal{B}_{R,+}$ of conservative strategies of $\mathrm{F}_{\mathfrak{m}%
}^{\flat }$ defined by (\ref{eq conserve strategy}) has a unique element
(Lemma \ref{lemma idiot interaction approx 2} ($\flat $)) and, by using
Proposition \ref{Bogo interaction negative} and (\ref{petite equality cool}%
), 
\begin{eqnarray}
\mathrm{F}_{\mathfrak{m}}^{\flat } &=&\underset{c_{a,+}\in \mathcal{B}_{R,+}}%
{\sup }\left\{ \inf\limits_{\rho \in E_{1}}\left\{ -\left\Vert
c_{a,+}\right\Vert _{2}^{2}+2\func{Re}\left\{ \left\langle e_{\Phi
_{a}}\left( \rho \right) +ie_{\Phi _{a}^{\prime }}\left( \rho \right)
,c_{a,+}\right\rangle \right\} \right. \right.  \notag \\
&&\left. \underset{}{}\left. -\left\Vert \Delta _{a,-}\left( \rho \right)
\right\Vert _{1}+e_{\Phi }(\rho )-\beta ^{-1}s(\rho )\right\} \right\}
\label{gap eq 2}
\end{eqnarray}%
provided the radius $R>0$ is taken sufficiently large.

Now, the real functional%
\begin{eqnarray*}
(\rho ,c_{a,+}) &\mapsto &-\Vert c_{a,+}\Vert _{2}^{2}+2\func{Re}\{\langle
e_{\Phi _{a}}(\rho )+ie_{\Phi _{a}^{\prime }}(\rho ),c_{a,+}\rangle \} \\
&&-\Vert \Delta _{a,-}(\rho )\Vert _{1}+e_{\Phi }(\rho )-\beta ^{-1}s(\rho )
\end{eqnarray*}%
is convex and weak$^{\ast }$--lower semi--continuous w.r.t. $\rho \in E_{1}$%
, but concave and weakly upper semi--continuous w.r.t. $c_{a,+}\in L_{+}^{2}(%
\mathcal{A},\mathbb{C})$. Additionally, the sets $E_{1}$ and $\mathcal{B}%
_{R,+}$ are clearly convex and compact, in the weak$^{\ast }$ and weak
topologies respectively. Therefore, from von Neumann min--max theorem%
\index{von Neumann min--max theorem} (Theorem \ref{theorem minmax von
Neumann}), there is a saddle point $\left( \omega ,d_{a,+}\right) \in
E_{1}\times L_{+}^{2}(\mathcal{A},\mathbb{C})$ which yields $\mathrm{P}_{%
\mathfrak{m}}^{\flat }=-\mathrm{F}_{\mathfrak{m}}^{\flat }$, see Definition %
\ref{definition saddle points}. In particular, by Lemma \ref{eq idiot sympa}%
, there are $\omega \in \mathit{\Omega }_{\mathfrak{m}(d_{a,+})}^{\sharp
}\cap \mathit{M}_{\mathfrak{m}}^{\flat }$ and $d_{a,+}\in \mathcal{C}_{%
\mathfrak{m}}^{\flat }$ satisfying the Euler--Lagrange equations (\ref{gap
equation1}), which are also called gap equations in Physics (Remark \ref%
{remark gap eq}). 
\end{proof}%

Note that (\ref{gap eq 2}) can be interpreted as a two--person zero--sum
game with a non--cooperative equilibrium defined by the saddle point $\left(
\omega ,d_{a,+}\right) $. Observe also that Lemma \ref{lemma super} combined
with\ Theorem \ref{theorem purement repulsif sympa} ($+$) directly yields
Theorem \ref{theorem saddle point} ($\sharp $) for purely repulsive
long--range interactions:

\begin{corollary}[Thermodynamics game and pressure -- I]
\label{corrolaire purement repulsif sympa bis}\mbox{ }\newline
\index{Thermodynamic game!pressure}%
\index{Pressure!variational problems!purely repulsive models}For any $%
\mathfrak{m}\in \mathcal{M}_{1}$ and under the condition that $\Phi
_{a,-}=\Phi _{a,-}^{\prime }=0$ (a.e.), 
\begin{equation*}
\mathrm{P}_{\mathfrak{m}}:=\mathrm{P}_{\mathfrak{m}}^{\sharp }=\mathrm{P}_{%
\mathfrak{m}}^{\flat }=-\mathrm{F}_{\mathfrak{m}}
\end{equation*}%
with\textit{\ }$\mathrm{F}_{\mathfrak{m}}:=\mathrm{F}_{\mathfrak{m}}^{\sharp
}=\mathrm{F}_{\mathfrak{m}}^{\flat }$, see Definition \ref{definition
two--person zero--sum game}.
\end{corollary}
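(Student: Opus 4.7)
The plan is to combine the two cited results with a direct decoupling of the approximating free--energy density functional that is forced by the hypothesis $\Phi_{a,-}=\Phi_{a,-}^{\prime}=0$ (a.e.). From Theorem~\ref{theorem purement repulsif sympa}~($+$) I immediately get $\mathrm{P}_{\mathfrak{m}}^{\sharp}=\mathrm{P}_{\mathfrak{m}}^{\flat}$, and Lemma~\ref{lemma super} furnishes $\mathrm{P}_{\mathfrak{m}}^{\flat}=-\mathrm{F}_{\mathfrak{m}}^{\flat}$. So the only genuine content of the corollary that still requires work is the identity $\mathrm{F}_{\mathfrak{m}}^{\sharp}=\mathrm{F}_{\mathfrak{m}}^{\flat}$.

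To establish that identity, I would unfold Definition~\ref{definition BCS-type model approximated} of the approximating interaction $\Phi(c_a)$ and note that, since $\gamma_{a,+}\gamma_{a,-}=0$ pointwise (being the positive/negative parts of a $\{-1,1\}$--valued function), the decomposition $c_a=c_{a,-}+c_{a,+}\in L_{-}^{2}\oplus L_{+}^{2}$ yields $\gamma_{a}c_a=-c_{a,-}+c_{a,+}$. Meanwhile, the hypothesis $\Phi_{a,-}=\Phi_{a,-}^{\prime}=0$ (a.e.) means that $\Phi_a+i\Phi_a^{\prime}=\Phi_{a,+}+i\Phi_{a,+}^{\prime}$ is supported on the set $\{\gamma_{a,+}=1\}$, which is disjoint from the support of $c_{a,-}$. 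Hence the scalar product $\langle \Phi_a+i\Phi_a^{\prime},\gamma_a c_a\rangle$ does not see $c_{a,-}$, so $\Phi(c_{a,-}+c_{a,+})=\Phi(c_{a,+})$ in $\mathcal{W}_1$ and therefore $P_{\mathfrak{m}}(c_{a,-}+c_{a,+})=P_{\mathfrak{m}}(c_{a,+})$.

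Plugging this identity into Definition~\ref{definition approximating free--energy} gives the decoupled form
\begin{equation*}
\mathfrak{f}_{\mathfrak{m}}(c_{a,-},c_{a,+})=\|c_{a,-}\|_{2}^{2}+\bigl(-\|c_{a,+}\|_{2}^{2}-P_{\mathfrak{m}}(c_{a,+})\bigr)=\|c_{a,-}\|_{2}^{2}+\mathfrak{f}_{\mathfrak{m}}(0,c_{a,+}).
\end{equation*}
Taking the infimum over $c_{a,-}\in L_{-}^{2}(\mathcal{A},\mathbb{C})$ is attained at $c_{a,-}=0$, so $\mathfrak{f}_{\mathfrak{m}}^{\flat}(c_{a,+})=\mathfrak{f}_{\mathfrak{m}}(0,c_{a,+})$, and conversely taking the supremum over $c_{a,+}$ first yields $\mathfrak{f}_{\mathfrak{m}}^{\sharp}(c_{a,-})=\|c_{a,-}\|_{2}^{2}+\sup_{c_{a,+}}\mathfrak{f}_{\mathfrak{m}}(0,c_{a,+})$, whose infimum over $c_{a,-}$ is again attained at $c_{a,-}=0$. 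Both computations give the same value $\sup_{c_{a,+}}\mathfrak{f}_{\mathfrak{m}}(0,c_{a,+})$, proving $\mathrm{F}_{\mathfrak{m}}^{\sharp}=\mathrm{F}_{\mathfrak{m}}^{\flat}$ and closing the chain of equalities.

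There is essentially no obstacle here: the whole argument is algebraic once one notices the orthogonality of supports of $\gamma_{a,\pm}$. The only point that deserves care is bookkeeping with the measurable function $\gamma_a$ and the fact that the convention $c_{a,\pm}=\gamma_{a,\pm}c_{a,\pm}$ makes $L_{-}^{2}$ effectively live on the complement of the support of the long--range (now purely repulsive) data; this is exactly what renders the two zero--sum games trivially equivalent and places the problem back under the scope of the already--proven Theorem~\ref{theorem purement repulsif sympa}~($+$) and Lemma~\ref{lemma super}.
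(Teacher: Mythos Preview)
Your proof is correct and follows essentially the same route as the paper. The paper gives no separate proof for this corollary beyond the one--line remark that Lemma~\ref{lemma super} combined with Theorem~\ref{theorem purement repulsif sympa}~($+$) ``directly yields'' the result; your decoupling computation (showing that $\Phi(c_{a,-}+c_{a,+})=\Phi(c_{a,+})$ because $\Phi_a+i\Phi_a'$ is supported on $\{\gamma_{a,+}=1\}$ while $c_{a,-}$ lives on the complement) is precisely the content hidden behind that word ``directly'', and it cleanly explains why $\mathrm{F}_{\mathfrak{m}}^{\sharp}=\mathrm{F}_{\mathfrak{m}}^{\flat}$ here.
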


We are now in position to prove Theorem \ref{theorem saddle point} ($\sharp $%
) in the general case.

\begin{lemma}[Thermodynamics game and pressure -- II]
\label{lemma super copy(1)}\mbox{ }\newline
\index{Thermodynamic game!pressure}%
\index{Pressure!variational problems}For any $\mathfrak{m}\in \mathcal{M}%
_{1} $, $\mathrm{P}_{\mathfrak{m}}^{\sharp }=-\mathrm{F}_{\mathfrak{m}%
}^{\sharp }$ with the pressure $\mathrm{P}_{\mathfrak{m}}^{\sharp }$ given
for $\mathfrak{m}\in \mathcal{M}_{1}$ by the minimization of the
free--energy density functional $f_{\mathfrak{m}}^{\sharp }$ over $E_{1}$,
see Definition \ref{Pressure} and Theorem \ref{BCS main theorem 1} (i).
\end{lemma}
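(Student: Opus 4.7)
The plan is to reduce the general case to the purely repulsive case already established in Corollary \ref{corrolaire purement repulsif sympa bis}. Starting from the identity $\mathrm{P}_{\mathfrak{m}}^{\sharp} = -\inf_{\rho \in E_{1}} g_{\mathfrak{m}}(\rho)$ from Theorem \ref{BCS main theorem 1}, I would apply Lemma \ref{eq idiot sympa} to rewrite the attractive term $\Vert \gamma_{a,-}\rho(\mathfrak{e}_{\Phi_{a}} + i\mathfrak{e}_{\Phi_{a}^{\prime}})\Vert_{2}^{2}$ as a supremum over $c_{a,-} \in L_{-}^{2}(\mathcal{A},\mathbb{C})$, which after taking the minus sign becomes an infimum. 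Commuting this infimum with the infimum over $\rho$ (trivially, since both are infima) yields
\[
-\mathrm{P}_{\mathfrak{m}}^{\sharp} = \inf_{c_{a,-} \in L_{-}^{2}(\mathcal{A},\mathbb{C})} \left\{\Vert c_{a,-}\Vert_{2}^{2} + \inf_{\rho \in E_{1}} h(\rho, c_{a,-})\right\},
\]
where $h(\rho, c_{a,-}) := \Vert \gamma_{a,+}\rho(\mathfrak{e}_{\Phi_{a}} + i\mathfrak{e}_{\Phi_{a}^{\prime}})\Vert_{2}^{2} - 2\mathrm{Re}\{\langle e_{\Phi_{a}}(\rho) + ie_{\Phi_{a}^{\prime}}(\rho), c_{a,-}\rangle\} + e_{\Phi}(\rho) - \beta^{-1}s(\rho)$.

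The next step is to identify $h(\rho, c_{a,-})$ with the reduced free--energy density $g_{\mathfrak{m}(c_{a,-})}(\rho)$ of the purely repulsive model $\mathfrak{m}(c_{a,-}) := (\Phi(c_{a,-}), \{\Phi_{a,+}\}_{a}, \{\Phi_{a,+}^{\prime}\}_{a}) \in \mathcal{M}_{1}$ defined in (\ref{model purement repulsif}). This identification is a direct calculation using $\gamma_{a,+}\gamma_{a,-} = 0$ (which makes the attractive long--range term in $g_{\mathfrak{m}(c_{a,-})}$ vanish) together with the explicit form $e_{\Phi(c_{a,-})}(\rho) = e_{\Phi}(\rho) - 2\mathrm{Re}\{\langle e_{\Phi_{a}}(\rho) + ie_{\Phi_{a}^{\prime}}(\rho), c_{a,-}\rangle\}$. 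Applying Theorem \ref{BCS main theorem 1} to $\mathfrak{m}(c_{a,-})$ then gives $\inf_{\rho} h(\rho, c_{a,-}) = -\mathrm{P}_{\mathfrak{m}(c_{a,-})}^{\sharp}$, and since $\mathfrak{m}(c_{a,-})$ has purely repulsive long--range interactions, Corollary \ref{corrolaire purement repulsif sympa bis} delivers the key identity $\mathrm{P}_{\mathfrak{m}(c_{a,-})}^{\sharp} = -\mathrm{F}_{\mathfrak{m}(c_{a,-})}^{\sharp}$.

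The final step is to compute $\mathrm{F}_{\mathfrak{m}(c_{a,-})}^{\sharp}$ explicitly. From Definition \ref{definition BCS-type model approximated} one verifies that, for all $c_{a,-}^{\prime} \in L_{-}^{2}(\mathcal{A},\mathbb{C})$ and $c_{a,+} \in L_{+}^{2}(\mathcal{A},\mathbb{C})$,
\[
\Phi_{\mathfrak{m}(c_{a,-})}(c_{a,-}^{\prime} + c_{a,+}) = \Phi + 2\mathrm{Re}\{\langle \Phi_{a} + i\Phi_{a}^{\prime}, c_{a,+} - c_{a,-}\rangle\} = \Phi_{\mathfrak{m}}(c_{a,-} + c_{a,+}),
\]
independently of $c_{a,-}^{\prime}$, so in particular $P_{\mathfrak{m}(c_{a,-})}(c_{a,-}^{\prime} + c_{a,+}) = P_{\mathfrak{m}}(c_{a,-} + c_{a,+})$. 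The functional $\mathfrak{f}_{\mathfrak{m}(c_{a,-})}(c_{a,-}^{\prime}, c_{a,+})$ therefore separates in its two arguments, its infimum over $c_{a,-}^{\prime}$ is attained at $0$, and the $\inf$--$\sup$ trivially commutes, giving $\mathrm{F}_{\mathfrak{m}(c_{a,-})}^{\sharp} = \sup_{c_{a,+}} \{-\Vert c_{a,+}\Vert_{2}^{2} - P_{\mathfrak{m}}(c_{a,-} + c_{a,+})\} = \mathfrak{f}_{\mathfrak{m}}^{\sharp}(c_{a,-}) - \Vert c_{a,-}\Vert_{2}^{2}$. Substituting back produces $\Vert c_{a,-}\Vert_{2}^{2} - \mathrm{P}_{\mathfrak{m}(c_{a,-})}^{\sharp} = \mathfrak{f}_{\mathfrak{m}}^{\sharp}(c_{a,-})$, and hence $-\mathrm{P}_{\mathfrak{m}}^{\sharp} = \inf_{c_{a,-}} \mathfrak{f}_{\mathfrak{m}}^{\sharp}(c_{a,-}) = \mathrm{F}_{\mathfrak{m}}^{\sharp}$, as desired.

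The main obstacle is not analytic but bookkeeping: one must carefully verify the two algebraic identifications $h(\rho, c_{a,-}) = g_{\mathfrak{m}(c_{a,-})}(\rho)$ and $P_{\mathfrak{m}(c_{a,-})}(c_{a,-}^{\prime} + c_{a,+}) = P_{\mathfrak{m}}(c_{a,-} + c_{a,+})$, both of which reduce to the orthogonality $\gamma_{a,+}\gamma_{a,-} = 0$ and to how the sign function $\gamma_{a}$ acts on $L_{\pm}^{2}(\mathcal{A},\mathbb{C})$. In contrast with the proof of Lemma \ref{lemma super}, no direct application of von Neumann's min--max theorem is required here: the min--max step is already encapsulated in Corollary \ref{corrolaire purement repulsif sympa bis}, which combines Lemma \ref{lemma super} with Theorem \ref{theorem purement repulsif sympa} ($+$) to treat the purely repulsive case.
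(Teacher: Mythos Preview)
Your proof is correct and follows essentially the same route as the paper: reduce the attractive part via Lemma~\ref{eq idiot sympa}, commute the two infima to reach $-\mathrm{P}_{\mathfrak{m}}^{\sharp}=\inf_{c_{a,-}}\{\Vert c_{a,-}\Vert_{2}^{2}-\mathrm{P}_{\mathfrak{m}(c_{a,-})}^{\sharp}\}$, and then invoke Corollary~\ref{corrolaire purement repulsif sympa bis} on the purely repulsive model $\mathfrak{m}(c_{a,-})$. The only cosmetic difference is that the paper carries out the first step with $f_{\mathfrak{m}}^{\sharp}$ restricted to $\mathcal{E}_{1}$ (via Lemma~\ref{lemma property free--energy density functional copy(1)}) rather than with $g_{\mathfrak{m}}$ on $E_{1}$, and leaves implicit the identification $P_{\mathfrak{m}(c_{a,-})}(c_{a,+})=P_{\mathfrak{m}}(c_{a,-}+c_{a,+})$ that you spell out.
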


\begin{proof}
From Theorem \ref{BCS main theorem 1} (i) combined with Lemmata \ref{lemma
property free--energy density functional copy(1)} and \ref{eq idiot sympa}, 
\begin{eqnarray}
-\mathrm{P}_{\mathfrak{m}}^{\sharp } &=&\inf\limits_{%
\hat{\rho}\in \mathcal{E}_{1}}\left\{ \underset{c_{a,-}\in L_{-}^{2}(%
\mathcal{A},\mathbb{C})}{\inf }\left\{ \left\Vert c_{a,-}\right\Vert
_{2}^{2}+f_{\mathfrak{m}(c_{a,-})}^{\sharp }\left( \hat{\rho}\right)
\right\} \right\}  \notag \\
&=&\underset{c_{a,-}\in L_{-}^{2}(\mathcal{A},\mathbb{C})}{\inf }\left\{
\inf\limits_{\hat{\rho}\in \mathcal{E}_{1}}\left\{ \left\Vert
c_{a,-}\right\Vert _{2}^{2}+f_{\mathfrak{m}(c_{a,-})}^{\sharp }\left( \hat{%
\rho}\right) \right\} \right\}  \label{equality sympa general}
\end{eqnarray}%
with%
\begin{equation*}
f_{\mathfrak{m}(c_{a,-})}^{\sharp }(\rho ):=-2\func{Re}\{\langle e_{\Phi
_{a}}(\rho )+ie_{\Phi _{a}^{\prime }}(\rho ),c_{a,-}\rangle \}+\Vert \Delta
_{a,+}(\rho )\Vert _{1}+e_{\Phi }(\rho )-\beta ^{-1}s(\rho )
\end{equation*}%
for all $\rho \in E_{1}$. By using again Lemma \ref{lemma property
free--energy density functional copy(1)} and Theorem \ref{BCS main theorem 1}
(i), for all $c_{a,-}\in L_{-}^{2}(\mathcal{A},\mathbb{C})$,%
\begin{equation*}
\mathrm{P}_{\mathfrak{m}(c_{a,-})}^{\sharp }=-\inf\limits_{\hat{\rho}\in 
\mathcal{E}_{1}}f_{\mathfrak{m}(c_{a,-})}^{\sharp }\left( \hat{\rho}\right)
=-\inf\limits_{\rho \in E_{1}}f_{\mathfrak{m}(c_{a,-})}^{\sharp }\left( \rho
\right)
\end{equation*}%
is the pressure associated with the purely repulsive long--range model 
\begin{equation}
\mathfrak{m}\left( c_{a,-}\right) :=(\Phi \left( c_{a,-}\right) ,\{\Phi
_{a,+}\}_{a\in \mathcal{A}},\{\Phi _{a,+}^{\prime }\}_{a\in \mathcal{A}})\in 
\mathcal{M}_{1},  \label{model purement repulsifbis}
\end{equation}%
where $\Phi _{a,+}:=\gamma _{a,+}\Phi _{a}$ and $\Phi _{a,+}^{\prime
}:=\gamma _{a,+}\Phi _{a}^{\prime }$, see (\ref{model purement repulsif}).
In particular, 
\begin{equation}
-\mathrm{P}_{\mathfrak{m}}^{\sharp }=\underset{c_{a,-}\in L_{-}^{2}(\mathcal{%
A},\mathbb{C})}{\inf }\left\{ \left\Vert c_{a,-}\right\Vert _{2}^{2}-\mathrm{%
P}_{\mathfrak{m}(c_{a,-})}^{\sharp }\right\} .
\label{equality sympa general2}
\end{equation}%
Therefore, applying Corollary \ref{corrolaire purement repulsif sympa bis}
on the model $\mathfrak{m}\left( c_{a,-}\right) $ with purely repulsive
long--range interactions, one gets from (\ref{equality sympa general2}) that%
\begin{equation*}
-\mathrm{P}_{\mathfrak{m}}^{\sharp }=\underset{c_{a,-}\in L_{-}^{2}(\mathcal{%
A},\mathbb{C})}{\inf }\left\{ \underset{c_{a,+}\in L_{+}^{2}(\mathcal{A},%
\mathbb{C})}{\sup }\mathfrak{f}_{\mathfrak{m}}\left( c_{a,-},c_{a,+}\right)
\right\} =\mathrm{F}_{\mathfrak{m}}^{\sharp }
\end{equation*}%
for any $\mathfrak{m}\in \mathcal{M}_{1}$. 
\end{proof}%

\noindent Observe that treating first the positive part of the model $%
\mathfrak{m}\in \mathcal{M}_{1}$ in $\mathrm{P}_{\mathfrak{m}}^{\sharp }$ by
using Lemma \ref{eq idiot sympa} is not necessarily useful in the general
case unless $\mathrm{F}_{\mathfrak{m}}^{\sharp }=\mathrm{F}_{\mathfrak{m}%
}^{\flat }$. Indeed, we approximate first the long--range attractions $\Phi
_{a,-}$ and $\Phi _{a,-}^{\prime }$ because we can then commute in (\ref%
{equality sympa general}) two infima. If we would have first approximated
the long--range repulsions $\Phi _{a,+}$ and $\Phi _{a,+}^{\prime }$, by
using Lemma \ref{eq idiot sympa}, we would have to commute a $\sup $ and a $%
\inf $, which is generally not possible because we would have obtained $%
\mathrm{P}_{\mathfrak{m}}^{\flat }$ and not $\mathrm{P}_{\mathfrak{m}%
}^{\sharp }\geq \mathrm{P}_{\mathfrak{m}}^{\flat }$, see Lemma \ref{lemma
super}.

Finally, we conclude by giving an interesting lemma about the continuity of
the thermodynamic decision rule%
\index{Thermodynamic game!decision rule} 
\begin{equation*}
\mathrm{r}_{+}:c_{a,-}\mapsto \mathrm{r}_{+}\left( c_{a,-}\right) \in 
\mathcal{C}_{\mathfrak{m}}^{\sharp }\left( c_{a,-}\right)
\end{equation*}%
(cf. (\ref{thermodyn decision rule})) with $\mathrm{r}_{+}\left(
c_{a,-}\right) $ being the unique element of the set $\mathcal{C}_{\mathfrak{%
m}}^{\sharp }\left( c_{a,-}\right) $\ defined by (\ref{eq conserve
strategybis}) for all $c_{a,-}\in L_{-}^{2}(\mathcal{A},\mathbb{C})$, cf.
Lemma \ref{lemma idiot interaction approx 2 copy(2)} ($\sharp $). This lemma
follows from Lemma \ref{lemma super copy(1)}.

\begin{lemma}[Weak--norm continuity of the map $\mathrm{r}_{+}$]
\label{lemma idiot interaction approx 2 copy(3)}\mbox{ }\newline
If $\gamma _{a,+}\neq 0$ (a.e.) then the map 
\begin{equation*}
\mathrm{r}_{+}:c_{a,-}\mapsto \mathrm{r}_{+}\left( c_{a,-}\right) \in 
\mathcal{C}_{\mathfrak{m}}^{\sharp }\left( c_{a,-}\right)
\end{equation*}%
from $L_{-}^{2}(\mathcal{A},\mathbb{C})$ to $L_{+}^{2}(\mathcal{A},\mathbb{C}%
)$ is continuous w.r.t. the weak topology in $L_{-}^{2}(\mathcal{A},\mathbb{C%
})$ and the norm topology in $L_{+}^{2}(\mathcal{A},\mathbb{C})$.
\end{lemma}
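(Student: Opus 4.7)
\textbf{Proof proposal for Lemma \ref{lemma idiot interaction approx 2 copy(3)}.} The plan is to show weak sequential continuity into the weak topology first, and then upgrade the convergence in $L_{+}^{2}(\mathcal{A},\mathbb{C})$ from weak to norm by proving that the norms of $\mathrm{r}_{+}(c_{a,-}^{(n)})$ converge. Sequential continuity suffices because, thanks to the separability assumption on $(\mathcal{A},\mathfrak{A},\mathfrak{a})$, the weak topology on any norm--bounded subset of $L^{2}(\mathcal{A},\mathbb{C})$ is metrizable (Theorem \ref{Metrizability}).

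First I would fix a sequence $\{c_{a,-}^{(n)}\}\subseteq L_{-}^{2}(\mathcal{A},\mathbb{C})$ converging weakly to $c_{a,-}$, and observe from Lemma \ref{lemma idiot interaction approx 2 copy(2)} ($\sharp$) (and the uniform radius bound of Proposition \ref{corrolaire sympa} (ii)) that all $\mathrm{r}_{+}(c_{a,-}^{(n)})$ lie in a single closed ball $\mathcal{B}_{R,+}$, hence every subsequence admits a weakly convergent subsubsequence with limit $d_{a,+}\in L_{+}^{2}(\mathcal{A},\mathbb{C})$. The key manipulation is to drop the irrelevant term $\|c_{a,-}\|_{2}^{2}$ from $\mathfrak{f}_{\mathfrak{m}}$ and work with
\begin{equation*}
G(c_{a,-},c_{a,+}):=-\|c_{a,+}\|_{2}^{2}-P_{\mathfrak{m}}(c_{a,-}+c_{a,+}),
\end{equation*}
for which $\mathrm{r}_{+}(c_{a,-})$ is still the unique maximizer in $c_{a,+}$. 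Testing the optimality of $\mathrm{r}_{+}(c_{a,-}^{(n)})$ against the fixed competitor $\mathrm{r}_{+}(c_{a,-})$ yields
\begin{equation*}
G(c_{a,-}^{(n)},\mathrm{r}_{+}(c_{a,-}^{(n)}))\geq G(c_{a,-}^{(n)},\mathrm{r}_{+}(c_{a,-})).
\end{equation*}
The right--hand side converges to $G(c_{a,-},\mathrm{r}_{+}(c_{a,-}))$ by weak continuity of $P_{\mathfrak{m}}$ on bounded balls (Proposition \ref{corrolaire sympa} (ii)), while the left--hand side satisfies $\limsup\leq G(c_{a,-},d_{a,+})$, thanks to weak lower semi--continuity of $\|\cdot\|_{2}^{2}$ combined with weak continuity of $P_{\mathfrak{m}}$ on bounded sets. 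Comparing the two bounds forces $G(c_{a,-},d_{a,+})\geq G(c_{a,-},\mathrm{r}_{+}(c_{a,-}))$, and the strict concavity part of Lemma \ref{lemma chiant property map de base} ($+$) (used already in Lemma \ref{lemma idiot interaction approx 2 copy(2)} ($\sharp$)) gives uniqueness of the maximizer, so $d_{a,+}=\mathrm{r}_{+}(c_{a,-})$. A standard subsequence argument then shows that the whole sequence $\mathrm{r}_{+}(c_{a,-}^{(n)})$ converges weakly to $\mathrm{r}_{+}(c_{a,-})$.

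Finally I would upgrade weak convergence to norm convergence by observing that the chain of inequalities above actually sandwiches the limit, giving
\begin{equation*}
\lim_{n\to\infty}G(c_{a,-}^{(n)},\mathrm{r}_{+}(c_{a,-}^{(n)}))=G(c_{a,-},\mathrm{r}_{+}(c_{a,-})).
\end{equation*}
Since $P_{\mathfrak{m}}(c_{a,-}^{(n)}+\mathrm{r}_{+}(c_{a,-}^{(n)}))\to P_{\mathfrak{m}}(c_{a,-}+\mathrm{r}_{+}(c_{a,-}))$ by the weak continuity on bounded sets already invoked, this limit identity reduces to $\|\mathrm{r}_{+}(c_{a,-}^{(n)})\|_{2}^{2}\to\|\mathrm{r}_{+}(c_{a,-})\|_{2}^{2}$. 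In the Hilbert space $L_{+}^{2}(\mathcal{A},\mathbb{C})$, weak convergence together with convergence of the norms yields norm convergence, completing the proof.

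I expect the main obstacle to be setting up the sandwich so that the norm of $c_{a,-}^{(n)}$ -- which is weakly lower semi--continuous but may fail to converge -- does not intrude; this is exactly why I would replace $\mathfrak{f}_{\mathfrak{m}}$ by the reduced functional $G$ before passing to the limit, so that the only norm square left to control is that of $\mathrm{r}_{+}(c_{a,-}^{(n)})$, whose limit behaviour is forced by the optimality identity itself.
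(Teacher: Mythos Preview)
Your proof is correct and follows essentially the same route as the paper: uniform boundedness of the maximizers, extraction of a weak subsequential limit, identification of that limit with $\mathrm{r}_{+}(c_{a,-})$ via weak lower semi--continuity of $\|\cdot\|_{2}^{2}$ plus weak continuity of $P_{\mathfrak{m}}$ on balls plus uniqueness, and finally the Hilbert--space upgrade from weak to norm convergence through convergence of the norms. The only cosmetic difference is that the paper packages the optimal value as the pressure $\mathrm{P}_{\mathfrak{m}(c_{a,-})}=\inf_{c_{a,+}}\{\|c_{a,+}\|_{2}^{2}+P_{\mathfrak{m}}(c_{a,-}+c_{a,+})\}$ (via Lemma \ref{lemma super copy(1)}) and proves its weak continuity through equicontinuity, whereas you obtain the same convergence directly from your sandwich with the reduced functional $G$; both devices serve the same purpose of eliminating the troublesome $\|c_{a,-}\|_{2}^{2}$ term.
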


\begin{proof}
First, recall that $\mathfrak{m}(c_{a,-})\in \mathcal{M}_{1}$ is the model
with purely repulsive long--range interactions defined by (\ref{model
purement repulsifbis}) for any $c_{a,-}\in L_{-}^{2}(\mathcal{A},\mathbb{C})$%
. From Lemma \ref{lemma super copy(1)}, its pressure equals%
\begin{equation}
\mathrm{P}_{\mathfrak{m}(c_{a,-})}=\underset{c_{a,+}\in L_{+}^{2}(\mathcal{A}%
,\mathbb{C})}{\inf }\left\{ \left\Vert c_{a,+}\right\Vert _{2}^{2}+P_{%
\mathfrak{m}}\left( c_{a,-}+c_{a,+}\right) \right\} =\left\Vert
c_{a,-}\right\Vert _{2}^{2}-\mathfrak{f}_{\mathfrak{m}}^{\sharp }\left(
c_{a,-}\right) .  \label{inequality debile}
\end{equation}

Take any sequence $\{c_{a,-}^{(n)}\}_{n=1}^{\infty }$ converging to $%
c_{a,-}\in L_{-}^{2}(\mathcal{A},\mathbb{C})$ in the weak topology. From the
uniform boundedness principle (Banach--Steinhaus theorem), it follows that
any weakly convergent sequence in $L_{-}^{2}(\mathcal{A},\mathbb{C})$ is
norm--bounded. In particular, the sequence $\{c_{a,-}^{(n)}\}_{n=1}^{\infty
} $ belongs to a ball $\mathcal{B}_{R,-}\subseteq L_{-}^{2}(\mathcal{A},%
\mathbb{C})$ of sufficiently large radius $R$ centered at $0$. By
Proposition \ref{corrolaire sympa} (ii), the family%
\begin{equation*}
\left\{ c_{a,-}\mapsto P_{\mathfrak{m}}\left( c_{a,-}+c_{a,+}\right)
\right\} _{c_{a,+}\in L_{+}^{2}(\mathcal{A},\mathbb{C})}
\end{equation*}%
of functionals is weakly equicontinuous on the ball $\mathcal{B}%
_{R,-}\subseteq L_{-}^{2}(\mathcal{A},\mathbb{C})$. It follows that 
\begin{equation}
\underset{n\rightarrow \infty }{\lim }\mathrm{P}_{\mathfrak{m}%
(c_{a,-}^{(n)})}=\mathrm{P}_{\mathfrak{m}(c_{a,-})}.
\label{inequality debile2}
\end{equation}

For all $n\in \mathbb{N}$, the unique $\mathrm{r}_{+}(c_{a,-}^{(n)})\in 
\mathcal{C}_{\mathfrak{m}}^{\sharp }(c_{a,-}^{(n)})$ satisfies 
\begin{equation}
\mathrm{P}_{\mathfrak{m}(c_{a,-}^{(n)})}=\Vert \mathrm{r}_{+}(c_{a,-}^{(n)})%
\Vert _{2}^{2}+P_{\mathfrak{m}}(c_{a,-}^{(n)}+\mathrm{r}_{+}(c_{a,-}^{(n)})).
\label{inequality debile2bis}
\end{equation}%
By (\ref{inequality debile}), we obtain that, for all $n\in \mathbb{N}$, 
\begin{equation}
\Vert \mathrm{r}_{+}(c_{a,-}^{(n)})\Vert _{2}^{2}\leq P_{\mathfrak{m}%
}(c_{a,-}^{(n)})-P_{\mathfrak{m}}(c_{a,-}^{(n)}+\mathrm{r}%
_{+}(c_{a,-}^{(n)})).  \label{inequality debile3}
\end{equation}%
Using Proposition \ref{corrolaire sympa} (ii), one also gets that, for all $%
n\in \mathbb{N}$, 
\begin{equation*}
P_{\mathfrak{m}}(c_{a,-}^{(n)})-P_{\mathfrak{m}}(c_{a,-}^{(n)}+\mathrm{r}%
_{+}(c_{a,-}^{(n)}))\leq 2\left( \Vert \Phi _{a}\Vert _{2}+\Vert \Phi
_{a}^{\prime }\Vert _{2}\right) \Vert \mathrm{r}_{+}(c_{a,-}^{(n)})\Vert
_{2}.
\end{equation*}%
Combined with (\ref{inequality debile3}), the previous inequality yields the
existence of a closed ball $\mathcal{B}_{R,+}\subseteq L_{+}^{2}(\mathcal{A},%
\mathbb{C})$ of radius $R$ centered at $0$ such that 
\begin{equation*}
\{\mathrm{r}_{+}(c_{a,-}^{(n)})\}_{n=1}^{\infty }\in \mathcal{B}_{R,+}.
\end{equation*}%
By compactness and metrizability of $\mathcal{B}_{R,+}$ in the weak topology
(cf. Banach--Alaoglu theorem and Theorem \ref{Metrizability}), we can then
assume that $\mathrm{r}_{+}(c_{a,-}^{(n)})$ weakly converges to $%
d_{a,+}^{\infty }\in L_{+}^{2}(\mathcal{A},\mathbb{C})$ as $n\rightarrow
\infty $.

The map $c_{a,+}\mapsto \left\Vert c_{a,+}\right\Vert _{2}^{2}$ from $%
L_{+}^{2}(\mathcal{A},\mathbb{C})$ to $\mathbb{R}$ is weakly lower
semi--continuity and, by Proposition \ref{corrolaire sympa} (ii), 
\begin{equation*}
c_{a,+}\mapsto P_{\mathfrak{m}}\left( c_{a,-}+c_{a,+}\right)
\end{equation*}%
is weakly continuous on $\mathcal{B}_{R,+}$. It follows that 
\begin{equation*}
\underset{n\rightarrow \infty }{\lim }\left\{ \Vert \mathrm{r}%
_{+}(c_{a,-}^{(n)})\Vert _{2}^{2}+P_{\mathfrak{m}}(c_{a,-}^{(n)}+\mathrm{r}%
_{+}(c_{a,-}^{(n)}))\right\} \geq \left\Vert d_{a,+}^{\infty }\right\Vert
_{2}^{2}+P_{\mathfrak{m}}\left( c_{a,-}+d_{a,+}^{\infty }\right) .
\end{equation*}%
Combined with (\ref{inequality debile}), (\ref{inequality debile2}), and (%
\ref{inequality debile2bis}), the previous inequality implies that $%
d_{a,+}^{\infty }\in \mathcal{C}_{\mathfrak{m}}^{\sharp }(c_{a,-})$ and 
\begin{equation}
\underset{n\rightarrow \infty }{\lim }\Vert \mathrm{r}_{+}(c_{a,-}^{(n)})%
\Vert _{2}^{2}=\Vert d_{a,+}^{\infty }\Vert _{2}^{2}  \label{limite debile4}
\end{equation}%
because of Proposition \ref{corrolaire sympa} (ii). As a consequence, 
\begin{equation*}
d_{a,+}^{\infty }=\mathrm{r}_{+}(c_{a,-})\in \mathcal{C}_{\mathfrak{m}%
}^{\sharp }(c_{a,-}),
\end{equation*}%
cf. Lemma \ref{lemma idiot interaction approx 2 copy(2)} ($\sharp $).
Moreover, since%
\begin{equation*}
\Vert \mathrm{r}_{+}(c_{a,-}^{(n)})-d_{a,+}^{\infty }\Vert _{2}^{2}=\Vert 
\mathrm{r}_{+}(c_{a,-}^{(n)})\Vert _{2}^{2}+\Vert d_{a,+}^{\infty }\Vert
_{2}^{2}-2%
\func{Re}\{\langle \mathrm{r}_{+}(c_{a,-}^{(n)}),d_{a,+}^{\infty }\rangle \},
\end{equation*}%
the limit (\ref{limite debile4}) and the weak convergence of the sequence $\{%
\mathrm{r}_{+}(c_{a,-}^{(n)})\}_{n=1}^{\infty }$ to $d_{a,+}^{\infty }$
imply that $\mathrm{r}_{+}(c_{a,-}^{(n)})$ converges in the norm topology to 
$d_{a,+}^{\infty }\in L_{+}^{2}(\mathcal{A},\mathbb{C})$ as $n\rightarrow
\infty $. 
\end{proof}%

\chapter{Bogoliubov Approximation\ and Effective Theories\label{Section
generalized eq state-effective theory}}

\setcounter{equation}{0}%
The precise characterization of the set $\mathit{\Omega }_{\mathfrak{m}%
}^{\sharp }$ of generalized t.i. equilibrium states defined in Definition %
\ref{definition equilibirum state} is performed in Theorem \ref{theorem
structure of omega copy(1)}. It is the weak$^{\ast }$--closed convex hull of
the set%
\index{Minimizers}%
\index{Free--energy density functional!reduced!minimizers} 
\begin{equation*}
\mathit{%
\hat{M}}_{\mathfrak{m}}:=\left\{ \omega \in E_{1}:\quad g_{\mathfrak{m}%
}\left( \omega \right) =\inf\limits_{\rho \in E_{1}}\,g_{\mathfrak{m}}(\rho
)\right\}
\end{equation*}%
of t.i. minimizers of the reduced free--energy density functional defined by 
\begin{equation}
g_{\mathfrak{m}}\left( \rho \right) :=\Vert \gamma _{a,+}\rho \left( 
\mathfrak{e}_{\Phi _{a}}+i\mathfrak{e}_{\Phi _{a}^{\prime }}\right) \Vert
_{2}^{2}-\Vert \gamma _{a,-}\rho \left( \mathfrak{e}_{\Phi _{a}}+i\mathfrak{e%
}_{\Phi _{a}^{\prime }}\right) \Vert _{2}^{2}+e_{\Phi }(\rho )-\beta
^{-1}s(\rho )  \label{Reduced free energybis}
\end{equation}%
for all $\rho \in E_{1}$, see Definition \ref{Reduced free energy} and (\ref%
{definition minimizers of reduced free energy}). Thus the first aim of the
present chapter is to characterize the weak$^{\ast }$--compact set $\mathit{%
\hat{M}}_{\mathfrak{m}}$ (see Lemma \ref{lemma minimum sympa copy(2)} (i)).

A key information to analyze the set $\mathit{\hat{M}}_{\mathfrak{m}}$ is
given by Theorem \ref{theorem saddle point}. It establishes a relation
between the thermodynamics of models $\mathfrak{m}\in \mathcal{M}_{1}$ and
the thermodynamics of their approximating interactions through thermodynamic
games. Combining this with some additional arguments we prove that $\mathit{%
\hat{M}}_{\mathfrak{m}}$ is a subset of the set\textit{\ }$\overline{\mathrm{%
co}%
\big(%
\mathit{M}(\mathfrak{T}_{\mathfrak{m}}^{\sharp })%
\big)%
}$ (\ref{union equilibrium states min max theory}) of convex combinations of
t.i. equilibrium states coming from the min--max local theory $\mathfrak{T}_{%
\mathfrak{m}}^{\sharp }$ (Definition \ref{definition effective theories bogo}%
). This last result is proven in Section \ref{equilibirum.paragraph} and
gives a first answer to an old open problem in mathematical physics -- first
addressed by Ginibre \cite[p. 28]{Ginibre} in 1968 within a different
context -- about the validity of the so--called Bogoliubov approximation
(see Section \ref{Section bog approx}) on the level of states. Then in\
Section \ref{section breaking theroy} we show that the set $\mathit{\Omega }%
_{\mathfrak{m}}^{\sharp }$ of generalized t.i. equilibrium states is not a
face for an uncountable set of models of $\mathcal{M}_{1}$. This last fact
implies that $\mathit{\Omega }_{\mathfrak{m}}^{\sharp }$ is strictly smaller
than $\overline{\mathrm{co}%
\big(%
\mathit{M}(\mathfrak{T}_{\mathfrak{m}}^{\sharp })%
\big)%
}$, i.e., $\mathit{\Omega }_{\mathfrak{m}}^{\sharp }\varsubsetneq \overline{%
\mathrm{co}%
\big(%
\mathit{M}(\mathfrak{T}_{\mathfrak{m}}^{\sharp })%
\big)%
}$,\ preventing such models to have effective local theories, see
Definitions \ref{definition effective theories} and \ref{definition local
theory}.

\section{Gap equations\label{equilibirum.paragraph}%
\index{Gap equations}}

From Lemma \ref{eq idiot sympa}, we have that 
\begin{eqnarray}
\inf\limits_{\rho \in E_{1}}g_{\mathfrak{m}}\left( \rho \right)
&=&\inf\limits_{\rho \in E_{1}}\left\{ \underset{c_{a,-}\in L_{-}^{2}(%
\mathcal{A},\mathbb{C})}{\inf }\left\{ \left\Vert c_{a,-}\right\Vert
_{2}^{2}+f_{\mathfrak{m}(c_{a,-})}^{\flat }\left( \rho \right) \right\}
\right\}  \label{equality sympa generalbis} \\
&=&\underset{c_{a,-}\in L_{-}^{2}(\mathcal{A},\mathbb{C})}{\inf }\left\{
\inf\limits_{\rho \in E_{1}}\left\{ \left\Vert c_{a,-}\right\Vert
_{2}^{2}+f_{\mathfrak{m}(c_{a,-})}^{\flat }\left( \rho \right) \right\}
\right\}  \label{equality sympa generalbisbis}
\end{eqnarray}%
for any $\mathfrak{m}\in \mathcal{M}_{1}$, where the model $\mathfrak{m}%
(c_{a,-})$ with \emph{purely repulsive} long--range interactions $\Phi
_{a,+}:=\gamma _{a,+}\Phi _{a}$ and $\Phi _{a,+}^{\prime }:=\gamma
_{a,+}\Phi _{a}^{\prime }$ is defined by (\ref{model purement repulsifbis})
in Section \ref{Section theorem saddle point bis} or by (\ref{model purement
repulsif}) in Section \ref{Section effective theories}.

It is thus natural to relate the set $\mathit{%
\hat{M}}_{\mathfrak{m}}$ of t.i. minimizers of the functional $g_{\mathfrak{m%
}}$ with the sets $\mathit{\Omega }_{\mathfrak{m}(d_{a,-})}^{\sharp }$ of
generalized t.i. equilibrium states of models $\mathfrak{m}\left(
d_{a,-}\right) $ for all $d_{a,-}\in \mathcal{C}_{\mathfrak{m}}^{\sharp }$ (%
\ref{eq conserve strategy}). In fact, we verify below that the set $\mathit{%
\hat{M}}_{\mathfrak{m}}$ is the union of the sets $\mathit{\Omega }_{%
\mathfrak{m}(d_{a,-})}^{\sharp }$ for all $d_{a,-}\in \mathcal{C}_{\mathfrak{%
m}}^{\sharp }$:

\begin{lemma}[$\mathit{\hat{M}}_{\mathfrak{m}}$ and generalized t.i.
equilibrium states of $\mathfrak{m}(d_{a,-})$]
\label{lemma explosion l du mec copacabana1 copy(2)}\mbox{ }\newline
\emph{(i)} 
\index{Free--energy density functional!reduced!minimizers}For any $\mathfrak{%
m}\in \mathcal{M}_{1}$,%
\begin{equation*}
\mathit{%
\hat{M}}_{\mathfrak{m}}=\underset{d_{a,-}\in \mathcal{C}_{\mathfrak{m}%
}^{\sharp }}{\cup }\mathit{\Omega }_{\mathfrak{m}(d_{a,-})}^{\sharp }.
\end{equation*}%
\emph{(ii)} For any state $\omega \in \mathit{\hat{M}}_{\mathfrak{m}}$,
there is $d_{a,-}\in \mathcal{C}_{\mathfrak{m}}^{\sharp }$ such that $\omega
\in \mathit{\Omega }_{\mathfrak{m}(d_{a,-})}^{\sharp }$ and%
\begin{equation}
d_{a,-}=\gamma _{a,-}(e_{\Phi _{a}}(\omega )+ie_{\Phi _{a}^{\prime }}(\omega
))\mathrm{\ \ (a.e.).}  \label{gap equation -}
\end{equation}%
\emph{(iii)} Conversely, for any $d_{a,-}\in \mathcal{C}_{\mathfrak{m}%
}^{\sharp }$, all states $\omega \in \mathit{\Omega }_{\mathfrak{m}%
(d_{a,-})}^{\sharp }\subseteq \mathit{\hat{M}}_{\mathfrak{m}}$ satisfy (\ref%
{gap equation -}).
\end{lemma}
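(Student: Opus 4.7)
My plan is to derive all three statements from the pointwise identity
\begin{equation*}
g_{\mathfrak{m}}(\rho) \;=\; \inf_{c_{a,-}\in L_-^2(\mathcal{A},\mathbb{C})}\Big\{\Vert c_{a,-}\Vert_2^2 + f_{\mathfrak{m}(c_{a,-})}^{\flat}(\rho)\Big\},\qquad \rho\in E_1,
\end{equation*}
which is just the un-infimized form of (\ref{equality sympa generalbis}). This identity follows from applying Lemma \ref{eq idiot sympa} to the attractive quadratic term $-\Vert\gamma_{a,-}\rho(\mathfrak{e}_{\Phi_a}+i\mathfrak{e}_{\Phi_a'})\Vert_2^2$ in $g_{\mathfrak{m}}$ and observing, via the linearity of $e_{\Phi}(\rho)$ in $\Phi$ and the fact that $\gamma_a c_{a,-}=-c_{a,-}$ for $c_{a,-}\in L_-^2$, that $e_{\Phi(c_{a,-})}(\rho) = e_{\Phi}(\rho) - 2\,\mathrm{Re}\langle e_{\Phi_a}(\rho)+ie_{\Phi_a'}(\rho),c_{a,-}\rangle$; the infimum is attained uniquely (a.e.) at $d_{a,-}(\rho) := \gamma_{a,-}(e_{\Phi_a}(\rho)+ie_{\Phi_a'}(\rho))$. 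Since $\mathfrak{m}(c_{a,-})$ has no long-range attractive component, Theorem \ref{theorem purement repulsif sympa} ($+$) yields both $f_{\mathfrak{m}(c_{a,-})}^{\flat} = g_{\mathfrak{m}(c_{a,-})}$ and $\mathit{\Omega}_{\mathfrak{m}(c_{a,-})}^{\sharp} = \mathit{\hat{M}}_{\mathfrak{m}(c_{a,-})}$, so $\inf_{\rho}f_{\mathfrak{m}(c_{a,-})}^{\flat}(\rho) = -\mathrm{P}_{\mathfrak{m}(c_{a,-})}$. Combining this with the exchange of infima in (\ref{equality sympa generalbisbis}) and Theorem \ref{theorem saddle point} ($\sharp$), the set $\mathcal{C}_{\mathfrak{m}}^{\sharp}$ defined in (\ref{eq conserve strategy}) coincides with the set of minimizers of the map $c_{a,-}\mapsto \Vert c_{a,-}\Vert_2^2 - \mathrm{P}_{\mathfrak{m}(c_{a,-})}$ over $L_-^2(\mathcal{A},\mathbb{C})$.

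I would then establish (iii) first. Fix $d_{a,-}\in\mathcal{C}_{\mathfrak{m}}^{\sharp}$ and $\omega\in\mathit{\Omega}_{\mathfrak{m}(d_{a,-})}^{\sharp} = \mathit{\hat{M}}_{\mathfrak{m}(d_{a,-})}$, so that $g_{\mathfrak{m}(d_{a,-})}(\omega) = -\mathrm{P}_{\mathfrak{m}(d_{a,-})}$. The pointwise identity gives
\begin{equation*}
g_{\mathfrak{m}}(\omega) \;\leq\; \Vert d_{a,-}\Vert_2^2 + g_{\mathfrak{m}(d_{a,-})}(\omega) \;=\; \Vert d_{a,-}\Vert_2^2 - \mathrm{P}_{\mathfrak{m}(d_{a,-})} \;=\; -\mathrm{P}_{\mathfrak{m}}^{\sharp} \;=\; \inf_{\rho\in E_1} g_{\mathfrak{m}}(\rho),
\end{equation*}
where the penultimate equality uses $d_{a,-}\in\mathcal{C}_{\mathfrak{m}}^{\sharp}$ and the last is Theorem \ref{BCS main theorem 1} (i). Hence $\omega\in\mathit{\hat{M}}_{\mathfrak{m}}$, and equality throughout forces $c_{a,-}=d_{a,-}$ to be the (unique a.e.) minimizer in the pointwise identity evaluated at $\omega$, which by the uniqueness clause of Lemma \ref{eq idiot sympa} is precisely the gap equation (\ref{gap equation -}).

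For (ii), given $\omega\in\mathit{\hat{M}}_{\mathfrak{m}}$, set $d_{a,-} := \gamma_{a,-}(e_{\Phi_a}(\omega)+ie_{\Phi_a'}(\omega))$, which satisfies (\ref{gap equation -}) by construction and realizes the infimum of the pointwise identity at $\rho=\omega$, so $\Vert d_{a,-}\Vert_2^2 + g_{\mathfrak{m}(d_{a,-})}(\omega) = g_{\mathfrak{m}}(\omega) = \inf g_{\mathfrak{m}}$. Comparing with $\inf g_{\mathfrak{m}} = \inf_{c_{a,-}}\{\Vert c_{a,-}\Vert_2^2 - \mathrm{P}_{\mathfrak{m}(c_{a,-})}\} \leq \Vert d_{a,-}\Vert_2^2 - \mathrm{P}_{\mathfrak{m}(d_{a,-})}$ and invoking the trivial bound $g_{\mathfrak{m}(d_{a,-})}(\omega) \geq -\mathrm{P}_{\mathfrak{m}(d_{a,-})}$, both inequalities must be equalities, simultaneously placing $\omega$ in $\mathit{\hat{M}}_{\mathfrak{m}(d_{a,-})} = \mathit{\Omega}_{\mathfrak{m}(d_{a,-})}^{\sharp}$ and $d_{a,-}$ in $\mathcal{C}_{\mathfrak{m}}^{\sharp}$. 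Statement (i) is then immediate from (ii) and (iii). No substantial obstacle arises beyond careful bookkeeping with the signs in Lemma \ref{eq idiot sympa} and the sign in the definition of $\Phi(c_{a,-})$; the proof requires no tools beyond Lemma \ref{eq idiot sympa}, Theorem \ref{theorem purement repulsif sympa} ($+$), Theorem \ref{theorem saddle point} ($\sharp$), and Theorem \ref{BCS main theorem 1}.
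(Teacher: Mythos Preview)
Your proof is correct and follows essentially the same approach as the paper: both arguments rest on the pointwise identity $g_{\mathfrak{m}}(\rho)=\inf_{c_{a,-}}\{\Vert c_{a,-}\Vert_2^2+f_{\mathfrak{m}(c_{a,-})}^{\flat}(\rho)\}$ obtained from Lemma~\ref{eq idiot sympa}, the commutation of the two infima in (\ref{equality sympa generalbis})--(\ref{equality sympa generalbisbis}), and the uniqueness clause of Lemma~\ref{eq idiot sympa} to extract the gap equation. Your version is more explicit---spelling out the chain of inequalities and the identification of $\mathcal{C}_{\mathfrak{m}}^{\sharp}$ via $\mathfrak{f}_{\mathfrak{m}}^{\sharp}(c_{a,-})=\Vert c_{a,-}\Vert_2^2-\mathrm{P}_{\mathfrak{m}(c_{a,-})}$---whereas the paper compresses this into the phrase ``solution of the variational problem (\ref{equality sympa generalbisbis})'', but the logical content is identical.
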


\begin{proof}
By using Lemma \ref{eq idiot sympa}, any minimizer $\omega \in \mathit{\hat{M%
}}_{\mathfrak{m}}$ is solution of the variational problem (\ref{equality
sympa generalbis}) with $d_{a,-}\in L_{-}^{2}(\mathcal{A},\mathbb{C})$
satisfying the Euler--Lagrange equations (\ref{gap equation -}). Since the
two infima commute in (\ref{equality sympa generalbis}), $(\omega ,d_{a,-})$%
\ is also solution of the variational problem (\ref{equality sympa
generalbisbis}), i.e., $\omega \in \mathit{\Omega }_{\mathfrak{m}%
(d_{a,-})}^{\sharp }$ and $d_{a,-}\in \mathcal{C}_{\mathfrak{m}}^{\sharp }$.

Conversely, for any $d_{a,-}\in \mathcal{C}_{\mathfrak{m}}^{\sharp }$ and
all $\omega \in \mathit{\Omega }_{\mathfrak{m}(d_{a,-})}^{\sharp }$, $%
(\omega ,d_{a,-})$\ is solution of the variational problem (\ref{equality
sympa generalbisbis}). The latter implies that $(\omega ,d_{a,-})$\ is a
minimum of (\ref{equality sympa generalbis}), i.e., by Lemma \ref{eq idiot
sympa}, $\omega \in \mathit{\hat{M}}_{\mathfrak{m}}$ and $d_{a,-}\in 
\mathcal{C}_{\mathfrak{m}}^{\sharp }$ satisfies the Euler--Lagrange
equations (\ref{gap equation -}). 
\end{proof}%

It now remains to characterize the set $\mathit{\Omega }_{\mathfrak{m}%
(d_{a,-})}^{\sharp }$\ of generalized t.i. equilibrium states for the model $%
\mathfrak{m}\left( d_{a,-}\right) $ (\ref{model purement repulsifbis}) with
purely repulsive long--range interactions. So, the next step is to analyze
the set $\mathit{\Omega }_{\mathfrak{m}}^{\sharp }$ for any arbitrary model
without long--range attractions, that is, $\mathfrak{m}\in \mathcal{M}_{1}$
such that $\Phi _{a,-}=\Phi _{a,-}^{\prime }=0$ (a.e.), see Definition \ref%
{long range attraction-repulsion}. In this case we can relate $\mathit{%
\Omega }_{\mathfrak{m}}^{\sharp }$ to the set 
\begin{equation*}
\mathit{\Omega }_{\mathfrak{m}}^{\sharp }\left( d_{a,+}\right) :=\left\{
\omega \in \mathit{M}_{\Phi (d_{a,+})}:\gamma _{a,+}(e_{\Phi _{a}}(\omega
)+ie_{\Phi _{a}^{\prime }}(\omega ))=d_{a,+}\mathrm{\ (a.e.)}\right\}
\end{equation*}%
defined by (\ref{subset of a face}) for the unique element $c_{a}=d_{a,+}\in 
\mathcal{C}_{\mathfrak{m}}^{\flat }$ (see (\ref{eq conserve strategy}) and
Lemma \ref{lemma idiot interaction approx 2} ($\flat $)), where $\mathit{M}%
_{\Phi (d_{a,+})}$ is the closed face described in Lemma \ref{remark
equilibrium state approches}. In fact we show below that the sets $\mathit{%
\Omega }_{\mathfrak{m}}^{\sharp }$ and $\mathit{\Omega }_{\mathfrak{m}%
}^{\sharp }\left( d_{a,+}\right) $ coincide for any model with purely
repulsive long--range interactions:

\begin{lemma}[$\mathit{\Omega }_{\mathfrak{m}}^{\sharp }$ for models without
long--range attractions]
\label{lemma explosion l du mec copacabana cas repulsif}\mbox{ }\newline
\index{States!generalized equilibrium!purely repulsive models}For any $%
\mathfrak{m}\in \mathcal{M}_{1}$ such that $\Phi _{a,-}=\Phi _{a,-}^{\prime
}=0$ (a.e.) and $\gamma _{a,+}\neq 0$, 
\begin{equation*}
\mathit{\Omega }_{\mathfrak{m}}^{\sharp }=\mathit{%
\hat{M}}_{\mathfrak{m}}=\mathit{\Omega }_{\mathfrak{m}}^{\sharp }\left(
d_{a,+}\right)
\end{equation*}%
with $d_{a,+}\in \mathcal{C}_{\mathfrak{m}}^{\flat }$ being unique.
\end{lemma}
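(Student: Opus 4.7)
The plan is to prove the double equality in two stages, using the saddle-point machinery that has already been set up. The first equality $\mathit{\Omega}_{\mathfrak{m}}^{\sharp} = \mathit{\hat{M}}_{\mathfrak{m}}$ is immediate from Theorem \ref{theorem purement repulsif sympa} ($+$), which also gives $\mathrm{P}_{\mathfrak{m}} := \mathrm{P}_{\mathfrak{m}}^{\sharp} = \mathrm{P}_{\mathfrak{m}}^{\flat}$. The uniqueness of $d_{a,+} \in \mathcal{C}_{\mathfrak{m}}^{\flat}$ is exactly Lemma \ref{lemma idiot interaction approx 2} ($\flat$), available because $\gamma_{a,+} \neq 0$ (a.e.). So the task reduces to identifying $\mathit{\hat{M}}_{\mathfrak{m}}$ with $\mathit{\Omega}_{\mathfrak{m}}^{\sharp}(d_{a,+})$ defined in (\ref{subset of a face}).

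The key tool is Lemma \ref{eq idiot sympa} applied to the repulsive part: since $\Phi_{a,-} = \Phi_{a,-}^{\prime} = 0$ (a.e.) we have $\gamma_a = \gamma_{a,+}$, and for every $\rho \in E_1$,
\begin{equation*}
g_{\mathfrak{m}}(\rho) = \sup_{c_{a,+} \in L_{+}^{2}(\mathcal{A},\mathbb{C})} \left\{ -\|c_{a,+}\|_{2}^{2} + f_{\mathfrak{m}}(\rho, c_{a,+}) \right\},
\end{equation*}
with the supremum \emph{uniquely} attained at $c_{a,+}(\rho) := \gamma_{a,+}(e_{\Phi_{a}}(\rho) + i e_{\Phi_{a}^{\prime}}(\rho))$. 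Combined with Lemma \ref{lemma super} (which gives $\mathrm{P}_{\mathfrak{m}}^{\flat} = -\mathrm{F}_{\mathfrak{m}}^{\flat}$) and the fact that the unique $d_{a,+}$ maximizes $c_{a,+} \mapsto -\|c_{a,+}\|_{2}^{2} - P_{\mathfrak{m}}(c_{a,+})$, we get the identity $-\mathrm{P}_{\mathfrak{m}} = -\|d_{a,+}\|_{2}^{2} - P_{\mathfrak{m}}(d_{a,+})$, which will serve as the hinge for both inclusions.

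For $\mathit{\Omega}_{\mathfrak{m}}^{\sharp}(d_{a,+}) \subseteq \mathit{\hat{M}}_{\mathfrak{m}}$, I take $\omega \in \mathit{\Omega}_{\mathfrak{m}}^{\sharp}(d_{a,+})$: the gap condition $\gamma_{a,+}(e_{\Phi_{a}}(\omega) + i e_{\Phi_{a}^{\prime}}(\omega)) = d_{a,+}$ collapses $g_{\mathfrak{m}}(\omega)$ to $-\|d_{a,+}\|_{2}^{2} + f_{\mathfrak{m}}(\omega, d_{a,+})$, and since $\omega \in \mathit{M}_{\Phi(d_{a,+})}$ Proposition \ref{corrolaire sympa} (i) gives $f_{\mathfrak{m}}(\omega, d_{a,+}) = -P_{\mathfrak{m}}(d_{a,+})$, whence $g_{\mathfrak{m}}(\omega) = -\mathrm{P}_{\mathfrak{m}}$. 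Conversely, for $\omega \in \mathit{\hat{M}}_{\mathfrak{m}}$, evaluating the supremum representation of $g_{\mathfrak{m}}(\omega)$ at $c_{a,+} = d_{a,+}$ yields the sandwich
\begin{equation*}
-\mathrm{P}_{\mathfrak{m}} = g_{\mathfrak{m}}(\omega) \geq -\|d_{a,+}\|_{2}^{2} + f_{\mathfrak{m}}(\omega, d_{a,+}) \geq -\|d_{a,+}\|_{2}^{2} - P_{\mathfrak{m}}(d_{a,+}) = -\mathrm{P}_{\mathfrak{m}},
\end{equation*}
so all inequalities are equalities. The second equality forces $\omega \in \mathit{M}_{\Phi(d_{a,+})}$, while the first, by the uniqueness clause of Lemma \ref{eq idiot sympa}, forces $d_{a,+} = c_{a,+}(\omega) = \gamma_{a,+}(e_{\Phi_{a}}(\omega) + i e_{\Phi_{a}^{\prime}}(\omega))$ (a.e.), i.e., $\omega \in \mathit{\Omega}_{\mathfrak{m}}^{\sharp}(d_{a,+})$. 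The only delicate point is this last appeal to uniqueness of the maximizer in the $c_{a,+}$-supremum: it is precisely where the assumption $\gamma_{a,+} \neq 0$ (a.e.) enters in an essential way, since without strict concavity in $c_{a,+}$ we could only conclude that $d_{a,+}$ is \emph{one} maximizer, not that it agrees with $c_{a,+}(\omega)$.
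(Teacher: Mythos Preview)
Your proof is correct and follows essentially the same approach as the paper's. Both arguments start from Theorem \ref{theorem purement repulsif sympa} ($+$) for $\mathit{\Omega}_{\mathfrak{m}}^{\sharp}=\mathit{\hat{M}}_{\mathfrak{m}}$, invoke Lemma \ref{lemma super} to obtain the hinge identity $-\mathrm{P}_{\mathfrak{m}}=-\|d_{a,+}\|_2^2-P_{\mathfrak{m}}(d_{a,+})$, and then force the gap equation for an arbitrary $\omega\in\mathit{\hat{M}}_{\mathfrak{m}}$ by a sandwich/equality argument. The only cosmetic difference is that the paper extracts the gap equation via the explicit quadratic identity $2\func{Re}\{\langle e_{\Phi_a}(\rho)+ie_{\Phi_a'}(\rho),d_{a,+}\rangle\}=\|\gamma_{a,+}\rho(\mathfrak{e}_{\Phi_a}+i\mathfrak{e}_{\Phi_a'})\|_2^2+\|d_{a,+}\|_2^2-\|\gamma_{a,+}\rho(\mathfrak{e}_{\Phi_a}+i\mathfrak{e}_{\Phi_a'})-d_{a,+}\|_2^2$ (which directly forces the squared-difference term to vanish), whereas you appeal to the uniqueness clause of Lemma \ref{eq idiot sympa}; these are two ways of saying the same thing. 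One small remark: the uniqueness of the maximizer in Lemma \ref{eq idiot sympa} actually holds without the hypothesis $\gamma_{a,+}\neq 0$ (a.e.), since $c_{a,+}\mapsto -\|c_{a,+}\|_2^2$ is already strictly concave on $L_+^2(\mathcal{A},\mathbb{C})$; the assumption $\gamma_{a,+}\neq 0$ is really used earlier, for the uniqueness of $d_{a,+}\in\mathcal{C}_{\mathfrak{m}}^{\flat}$ via Lemma \ref{lemma idiot interaction approx 2} ($\flat$).
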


\begin{proof}
If $\Phi _{a,-}=\Phi _{a,-}^{\prime }=0$ (a.e.) then, by (\ref{convex
functional g_m}) and (\ref{Reduced free energybis}), $f_{\mathfrak{m}%
}^{\flat }=g_{\mathfrak{m}}$ on $E_{1}$ and, by\ Theorem \ref{theorem
purement repulsif sympa} ($+$), 
\begin{equation*}
\mathit{\Omega }_{\mathfrak{m}}^{\sharp }=\mathit{\hat{M}}_{\mathfrak{m}}=%
\mathit{M}_{\mathfrak{m}}^{\flat },
\end{equation*}%
where $\mathit{M}_{\mathfrak{m}}^{\flat }$ is the non--empty set of t.i.
minimizers of $f_{\mathfrak{m}}^{\flat }$, see (\ref{set of minimizers de f
bemol}). Therefore, since $\mathfrak{m}(d_{a,+})=\Phi (d_{a,+})$ when $\Phi
_{a,-}=\Phi _{a,-}^{\prime }=0$ (a.e.) (cf. (\ref{m approche 2})), applying\
Lemma \ref{lemma super} we have a t.i. equilibrium state $\omega \in \mathit{%
M}_{\Phi (d_{a,+})}\cap \mathit{\Omega }_{\mathfrak{m}}^{\sharp }$
satisfying the Euler--Lagrange equations 
\begin{equation}
\gamma _{a,+}(e_{\Phi _{a}}(\omega )+ie_{\Phi _{a}^{\prime }}(\omega
))=d_{a,+}\mathrm{\ (a.e.)},  \label{gap equation repulsif}
\end{equation}%
where $d_{a,+}\in \mathcal{C}_{\mathfrak{m}}^{\flat }$ is the unique element
of the set $\mathcal{C}_{\mathfrak{m}}^{\flat }$, see Lemma \ref{lemma idiot
interaction approx 2} ($\flat $).

We now observe that 
\begin{eqnarray}
&&2\func{Re}\{\langle e_{\Phi _{a}}(\rho )+ie_{\Phi _{a}^{\prime }}(\rho
),d_{a,+}\rangle \}  \label{equality super bisbis} \\
&=&\Vert \gamma _{a,+}\rho (\mathfrak{e}_{\Phi _{a}}+i\mathfrak{e}_{\Phi
_{a}^{\prime }})\Vert _{2}^{2}+\Vert d_{a,+}\Vert _{2}^{2}-\Vert \gamma
_{a,+}\rho (\mathfrak{e}_{\Phi _{a}}+i\mathfrak{e}_{\Phi _{a}^{\prime
}})-d_{a,+}\Vert _{2}^{2}  \notag
\end{eqnarray}%
and since $\omega \in \mathit{M}_{\Phi (d_{a,+})}\cap \mathit{\Omega }_{%
\mathfrak{m}}^{\sharp }$ satisfies (\ref{gap equation repulsif}), we obtain
that 
\begin{eqnarray}
&&\inf\limits_{\rho \in E_{1}}\left\{ 2\func{Re}\{\langle e_{\Phi _{a}}(\rho
)+ie_{\Phi _{a}^{\prime }}(\rho ),d_{a,+}\rangle +e_{\Phi }(\rho )-\beta
^{-1}s(\rho )\right\}  \label{equality super bis} \\
&=&\Vert \gamma _{a,+}\omega (\mathfrak{e}_{\Phi _{a}}+i\mathfrak{e}_{\Phi
_{a}^{\prime }})\Vert _{2}^{2}+e_{\Phi }(\omega )-\beta ^{-1}s(\omega
)+\Vert d_{a,+}\Vert _{2}^{2}  \notag \\
&=&g_{\mathfrak{m}}(\omega )+\Vert d_{a,+}\Vert _{2}^{2}  \notag \\
&=&\inf\limits_{\rho \in E_{1}}g_{\mathfrak{m}}(\rho )+\Vert d_{a,+}\Vert
_{2}^{2}.  \label{equality super bisequality super bis}
\end{eqnarray}%
Going backwards from (\ref{equality super bisequality super bis}) to (\ref%
{equality super bis}) and using then (\ref{equality super bisbis}), we
obtain, for any generalized t.i. equilibrium state $\omega \in \mathit{%
\Omega }_{\mathfrak{m}}^{\sharp }=\mathit{\hat{M}}_{\mathfrak{m}}$, the
inequality 
\begin{equation*}
g_{\mathfrak{m}}(\omega )+\Vert d_{a,+}\Vert _{2}^{2}\leq g_{\mathfrak{m}%
}(\omega )-\Vert \gamma _{a,+}\omega \left( \mathfrak{e}_{\Phi _{a}}+i%
\mathfrak{e}_{\Phi _{a}^{\prime }}\right) -d_{a,+}\Vert _{2}^{2}+\Vert
d_{a,+}\Vert _{2}^{2},
\end{equation*}%
i.e., 
\begin{equation*}
\Vert \gamma _{a,+}\omega \left( \mathfrak{e}_{\Phi _{a}}+i\mathfrak{e}%
_{\Phi _{a}^{\prime }}\right) -d_{a,+}\Vert _{2}^{2}\leq 0.
\end{equation*}%
As a consequence, any generalized t.i. equilibrium state $\omega \in \mathit{%
\Omega }_{\mathfrak{m}}^{\sharp }=\mathit{\hat{M}}_{\mathfrak{m}}$ satisfies
the Euler--Lagrange equations (\ref{gap equation repulsif}) with $d_{a,+}\in 
\mathcal{C}_{\mathfrak{m}}^{\flat }$. Combining this with (\ref{equality
super bisbis}) it follows that $\mathit{\Omega }_{\mathfrak{m}}^{\sharp
}\subseteq \mathit{\Omega }_{\mathfrak{m}}^{\sharp }(d_{a,+})$.

Conversely, take any $\omega \in \mathit{M}_{\Phi (d_{a,+})}$ satisfying the
Euler--Lagrange equations (\ref{gap equation repulsif}) with $d_{a,+}\in 
\mathcal{C}_{\mathfrak{m}}^{\flat }$. Such a state $\omega \in \mathit{M}%
_{\Phi (d_{a,+})}$ is a solution of the variational problem (\ref{equality
super bis}) and we easily deduce that $\omega \in \mathit{\Omega }_{%
\mathfrak{m}}^{\sharp }=\mathit{\hat{M}}_{\mathfrak{m}}$. 
\end{proof}%

Applying Lemma \ref{lemma explosion l du mec copacabana cas repulsif} to the
model $\mathfrak{m}\left( d_{a,-}\right) $ (\ref{model purement repulsifbis}%
) with purely repulsive long--range interactions, we obtain the following
corollary:

\begin{corollary}[Generalized t.i. equilibrium states of $\mathfrak{m}%
(d_{a,-})$]
\label{corollary explosion l du mec copacabana final copy(1)}\mbox{ }\newline
\index{States!generalized equilibrium!purely attractive models}For any $%
\mathfrak{m}\in \mathcal{M}_{1}$ and all $d_{a,-}\in \mathcal{C}_{\mathfrak{m%
}}^{\sharp }$, 
\begin{equation*}
\mathit{\Omega }_{\mathfrak{m}(d_{a,-})}^{\sharp }=\mathit{\Omega }_{%
\mathfrak{m}(d_{a,-})}^{\sharp }\left( \mathrm{r}_{+}(d_{a,-})\right) =%
\mathit{\Omega }_{\mathfrak{m}}^{\sharp }\left( d_{a,-}+\mathrm{r}%
_{+}(d_{a,-})\right)
\end{equation*}%
are (non--empty) convex and weak$^{\ast }$--compact subsets of $E_{1}$
satisfying 
\begin{equation*}
\mathit{\Omega }_{\mathfrak{m}(d_{a,-})}^{\sharp }\cap \mathit{\Omega }_{%
\mathfrak{m}(d_{a,-}^{\prime })}^{\sharp }=\emptyset
\end{equation*}%
whenever $d_{a,-}\neq d_{a,-}^{\prime }$ with $d_{a,-},d_{a,-}^{\prime }\in 
\mathcal{C}_{\mathfrak{m}}^{\sharp }$. Here, $\mathrm{r}_{+}$ is the
thermodynamic decision rule defined by (\ref{thermodyn decision rule}) and $%
\mathit{\Omega }_{\mathfrak{m}}^{\sharp }\left( d_{a,-}+\mathrm{r}%
_{+}(d_{a,-})\right) $ is defined by (\ref{subset of a face}).
\end{corollary}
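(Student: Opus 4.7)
The plan is to obtain this corollary as an immediate transfer of Lemma \ref{lemma explosion l du mec copacabana cas repulsif} from the model $\mathfrak{m}$ to the auxiliary model $\mathfrak{m}(d_{a,-})\in \mathcal{M}_{1}$ defined in (\ref{model purement repulsifbis}), and then to identify the resulting parameter set with $\mathrm{r}_{+}(d_{a,-})$ and the resulting state set with $\mathit{\Omega}_{\mathfrak{m}}^{\sharp}(d_{a,-}+\mathrm{r}_{+}(d_{a,-}))$. By construction, $\mathfrak{m}(d_{a,-})$ has long--range attractions equal to zero (a.e.), so Lemma \ref{lemma explosion l du mec copacabana cas repulsif} applies and yields a single non--empty, convex, weak$^{\ast}$--compact set $\mathit{\Omega}_{\mathfrak{m}(d_{a,-})}^{\sharp} = \mathit{\Omega}_{\mathfrak{m}(d_{a,-})}^{\sharp}(\tilde{d}_{a,+})$, where $\tilde{d}_{a,+}$ is the unique element of the associated set $\mathcal{C}_{\mathfrak{m}(d_{a,-})}^{\flat}$ of conservative strategies of the repulsive player. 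Non--emptiness, convexity and weak$^{\ast}$--compactness of the sets in the statement will thus be a consequence of Lemma \ref{remark equilibrium state approches} together with Lemma \ref{lemma minimum sympa copy(1)}.

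The key identification is then $\tilde{d}_{a,+}=\mathrm{r}_{+}(d_{a,-})$. To obtain it I would unfold Definition \ref{definition BCS-type model approximated} for $\mathfrak{m}(d_{a,-})$: since the approximating interaction is real--linear in its argument, one has $\Phi_{\mathfrak{m}(d_{a,-})}(c_{a,+})=\Phi_{\mathfrak{m}}(d_{a,-}+c_{a,+})$, hence $P_{\mathfrak{m}(d_{a,-})}(c_{a,+})=P_{\mathfrak{m}}(d_{a,-}+c_{a,+})$ by Proposition \ref{corrolaire sympa} (i). Substituting this into the variational problem defining $\mathcal{C}_{\mathfrak{m}(d_{a,-})}^{\flat}$ turns it exactly into the one defining $\mathcal{C}_{\mathfrak{m}}^{\sharp}(d_{a,-})$ (see Definition \ref{definition two--person zero--sum game} and (\ref{eq conserve strategybis})), whose unique element is $\mathrm{r}_{+}(d_{a,-})$ by Lemma \ref{lemma idiot interaction approx 2 copy(2)} ($\sharp$).

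With this identification in hand, the remaining equality $\mathit{\Omega}_{\mathfrak{m}(d_{a,-})}^{\sharp}(\mathrm{r}_{+}(d_{a,-}))=\mathit{\Omega}_{\mathfrak{m}}^{\sharp}(d_{a,-}+\mathrm{r}_{+}(d_{a,-}))$ is a bookkeeping check from the definition (\ref{subset of a face}). The ambient face is the same on both sides, namely $\mathit{M}_{\Phi(d_{a,-}+\mathrm{r}_{+}(d_{a,-}))}$, by the linearity argument above. On the left, the attached constraint is $\gamma_{a,+}(e_{\Phi_{a}}(\omega)+ie_{\Phi_{a}'}(\omega))=\mathrm{r}_{+}(d_{a,-})$; combined with the gap equation $\gamma_{a,-}(e_{\Phi_{a}}(\omega)+ie_{\Phi_{a}'}(\omega))=d_{a,-}$ provided by Lemma \ref{lemma explosion l du mec copacabana1 copy(2)} (iii) applied to $\omega\in\mathit{\Omega}_{\mathfrak{m}(d_{a,-})}^{\sharp}\subseteq\mathit{\hat{M}}_{\mathfrak{m}}$, and using $\gamma_{a,+}+\gamma_{a,-}=|\gamma_{a}|=1$, the two relations sum to the single constraint $e_{\Phi_{a}}(\omega)+ie_{\Phi_{a}'}(\omega)=d_{a,-}+\mathrm{r}_{+}(d_{a,-})$ appearing on the right, and conversely the right--hand constraint implies both of the left--hand constraints via multiplication by $\gamma_{a,\pm}$ and the orthogonality $L_{-}^{2}\perp L_{+}^{2}$.

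For the disjointness, suppose $\omega\in\mathit{\Omega}_{\mathfrak{m}(d_{a,-})}^{\sharp}\cap\mathit{\Omega}_{\mathfrak{m}(d_{a,-}')}^{\sharp}$ with $d_{a,-},d_{a,-}'\in\mathcal{C}_{\mathfrak{m}}^{\sharp}$. By the identifications just proved, $\omega$ satisfies
\begin{equation*}
e_{\Phi_{a}}(\omega)+ie_{\Phi_{a}'}(\omega)=d_{a,-}+\mathrm{r}_{+}(d_{a,-})=d_{a,-}'+\mathrm{r}_{+}(d_{a,-}')\quad\text{(a.e.)}.
\end{equation*}
Multiplying by $\gamma_{a,-}$ and using $L_{-}^{2}\perp L_{+}^{2}$ gives $d_{a,-}=d_{a,-}'$, forcing the intersection to be empty when $d_{a,-}\neq d_{a,-}'$. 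I do not foresee any significant obstacle: the whole corollary is essentially a transfer principle, and the only step where care is needed is the matching of constraint sets in paragraph three, which is purely algebraic.
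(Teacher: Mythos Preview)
Your proof is correct and follows essentially the same approach as the paper's. The paper's own proof invokes Lemma \ref{lemma minimum sympa copy(1)} for non--emptiness/convexity/compactness, Lemma \ref{lemma explosion l du mec copacabana1 copy(2)} (iii) for the $\gamma_{a,-}$ gap equation, and Lemma \ref{lemma explosion l du mec copacabana cas repulsif} applied to $\mathfrak{m}(d_{a,-})$ for the first equality, then combines these exactly as you do; the one place where you are more explicit than the paper is the identification $\mathcal{C}_{\mathfrak{m}(d_{a,-})}^{\flat}=\{\mathrm{r}_{+}(d_{a,-})\}$ via $P_{\mathfrak{m}(d_{a,-})}(c_{a,+})=P_{\mathfrak{m}}(d_{a,-}+c_{a,+})$, which the paper takes for granted when it writes $\mathrm{r}_{+}(d_{a,-})$ directly.
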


\begin{proof}%
First, $\mathit{\Omega }_{\mathfrak{m}(d_{a,-})}^{\sharp }$ is a
(non--empty) convex and weak$^{\ast }$--compact subset of $E_{1}$ for any $%
d_{a,-}\in \mathcal{C}_{\mathfrak{m}}^{\sharp }$, by Lemma \ref{lemma
minimum sympa copy(1)}. By Lemma \ref{lemma explosion l du mec copacabana1
copy(2)} (iii), all states $\omega \in \mathit{\Omega }_{\mathfrak{m}%
(d_{a,-})}^{\sharp }$ must satisfy (\ref{gap equation -}). On the other
hand, by Lemma \ref{lemma explosion l du mec copacabana cas repulsif}
applied to the model $\mathfrak{m}\left( d_{a,-}\right) $ (\ref{model
purement repulsifbis}) without long--range attractions, we have 
\begin{equation*}
\mathit{\Omega }_{\mathfrak{m}(d_{a,-})}^{\sharp }=\mathit{\Omega }_{%
\mathfrak{m}(d_{a,-})}^{\sharp }\left( \mathrm{r}_{+}(d_{a,-})\right) ,
\end{equation*}%
see (\ref{thermodyn decision rule}). Therefore, by combining (\ref{gap
equation -}) with the last equality, we deduce that 
\begin{equation*}
\mathit{\Omega }_{\mathfrak{m}(d_{a,-})}^{\sharp }=\mathit{\Omega }_{%
\mathfrak{m}}^{\sharp }\left( d_{a,-}+\mathrm{r}_{+}(d_{a,-})\right) ,
\end{equation*}%
which in turn implies that%
\begin{equation*}
\mathit{\Omega }_{\mathfrak{m}(d_{a,-})}^{\sharp }\cap \mathit{\Omega }_{%
\mathfrak{m}(d_{a,-}^{\prime })}^{\sharp }=\emptyset \ 
\end{equation*}%
when $d_{a,-}\neq d_{a,-}^{\prime }$ with $d_{a,-},d_{a,-}^{\prime }\in 
\mathcal{C}_{\mathfrak{m}}^{\sharp }$. 
\end{proof}%

As a consequence, by combining Lemma \ref{lemma explosion l du mec
copacabana1 copy(2)} (i) with Corollary \ref{corollary explosion l du mec
copacabana final copy(1)}, we finally obtain the following theorem:

\begin{theorem}[Characterization of the set $\mathit{%
\hat{M}}_{\mathfrak{m}}$]
\label{theorem explosion l du mec copacabana final}\mbox{ }\newline
\index{Free--energy density functional!reduced!minimizers}For any $\mathfrak{%
m}\in \mathcal{M}_{1}$, 
\begin{equation*}
\mathit{%
\hat{M}}_{\mathfrak{m}}=\underset{d_{a,-}\in \mathcal{C}_{\mathfrak{m}%
}^{\sharp }}{\bigcup }\mathit{\Omega }_{\mathfrak{m}}^{\sharp }\left(
d_{a,-}+\mathrm{r}_{+}(d_{a,-})\right) ,
\end{equation*}%
where $\mathrm{r}_{+}$ is the thermodynamic decision rule defined by (\ref%
{thermodyn decision rule}).%
\index{Thermodynamic game!decision rule}
\end{theorem}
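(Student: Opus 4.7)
The plan is to assemble the theorem from the two main ingredients already in place: the Bogoliubov-type linearization of the attractive part together with the commutation of infima, and the analysis of purely repulsive models.

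First I would start from the variational problem $\inf_{\rho\in E_{1}} g_{\mathfrak{m}}(\rho)$ and use Lemma \ref{eq idiot sympa} to rewrite the quadratic attractive term $-\|\gamma_{a,-}\rho(\mathfrak{e}_{\Phi_a}+i\mathfrak{e}_{\Phi'_a})\|_2^2$ as the pointwise supremum over $c_{a,-}\in L_-^2(\mathcal{A},\mathbb{C})$ of a linear-in-$\rho$ expression, yielding the two-level variational problem displayed in equations \eqref{equality sympa generalbis}--\eqref{equality sympa generalbisbis}. Since the linearized expression is still a sum of linear terms in $c_{a,-}$ and in $\rho$, the two infima commute. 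The key consequence I would extract is that $\omega\in\hat{\mathit{M}}_{\mathfrak{m}}$ if and only if there exists $d_{a,-}\in L_-^2$ which is simultaneously (i) a conservative strategy, i.e.\ $d_{a,-}\in\mathcal{C}_{\mathfrak{m}}^{\sharp}$, and (ii) such that $\omega$ is a generalized t.i.\ equilibrium state of the purely repulsive model $\mathfrak{m}(d_{a,-})$ defined by \eqref{model purement repulsifbis}, with the Euler--Lagrange constraint $d_{a,-}=\gamma_{a,-}(e_{\Phi_a}(\omega)+ie_{\Phi'_a}(\omega))$ a.e. This is precisely the content of Lemma \ref{lemma explosion l du mec copacabana1 copy(2)}, giving the preliminary equality
\[
\hat{\mathit{M}}_{\mathfrak{m}}=\bigcup_{d_{a,-}\in\mathcal{C}_{\mathfrak{m}}^{\sharp}}\mathit{\Omega}_{\mathfrak{m}(d_{a,-})}^{\sharp}.
\]

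Second, for each fixed $d_{a,-}\in\mathcal{C}_{\mathfrak{m}}^{\sharp}$ the model $\mathfrak{m}(d_{a,-})$ has \emph{purely repulsive} long--range interactions, so I would apply Lemma \ref{lemma explosion l du mec copacabana cas repulsif} to it. This gives the identification $\mathit{\Omega}_{\mathfrak{m}(d_{a,-})}^{\sharp}=\mathit{\Omega}_{\mathfrak{m}(d_{a,-})}^{\sharp}(\mathrm{r}_{+}(d_{a,-}))$, where $\mathrm{r}_{+}(d_{a,-})$ is the unique maximizer of $c_{a,+}\mapsto\mathfrak{f}_{\mathfrak{m}}(d_{a,-},c_{a,+})$ singled out by the thermodynamic decision rule \eqref{thermodyn decision rule} (uniqueness coming from the strict concavity established in Lemma \ref{lemma idiot interaction approx 2 copy(2)}($\sharp$)). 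Combining this with the Euler--Lagrange constraint for $d_{a,-}$ then identifies
\[
\mathit{\Omega}_{\mathfrak{m}(d_{a,-})}^{\sharp}=\mathit{\Omega}_{\mathfrak{m}}^{\sharp}\!\left(d_{a,-}+\mathrm{r}_{+}(d_{a,-})\right),
\]
which is exactly Corollary \ref{corollary explosion l du mec copacabana final copy(1)}. Substituting into the union above yields the announced formula for $\hat{\mathit{M}}_{\mathfrak{m}}$.

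The main conceptual obstacle here is not in the assembly step itself, which is almost formal, but in making sure that the two lemmas being invoked are genuinely available at the level of generality needed. The delicate point is the interchange of the two infima over $L_-^2(\mathcal{A},\mathbb{C})$ and $E_{1}$ in the linearized problem: one has to argue that the set of minimizers of the outer infimum really coincides with $\mathcal{C}_{\mathfrak{m}}^{\sharp}$ as defined by \eqref{eq conserve strategy} rather than with some a priori larger set, and for this one needs the weak compactness and non-emptiness of $\mathcal{C}_{\mathfrak{m}}^{\sharp}$ from Lemma \ref{lemma idiot interaction approx 2}($\sharp$) together with Theorem \ref{theorem saddle point}($\sharp$) identifying $\mathrm{P}_{\mathfrak{m}}^{\sharp}$ with $-\mathrm{F}_{\mathfrak{m}}^{\sharp}$. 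Once those identifications are in place, disjointness of the sets $\mathit{\Omega}_{\mathfrak{m}}^{\sharp}(d_{a,-}+\mathrm{r}_{+}(d_{a,-}))$ for distinct $d_{a,-}\in\mathcal{C}_{\mathfrak{m}}^{\sharp}$ follows from the Euler--Lagrange characterization, so the union on the right-hand side is in fact a disjoint union, and the theorem is proved.
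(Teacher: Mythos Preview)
Your proposal is correct and follows essentially the same route as the paper: combine Lemma \ref{lemma explosion l du mec copacabana1 copy(2)}(i) to get $\hat{\mathit{M}}_{\mathfrak{m}}=\bigcup_{d_{a,-}\in\mathcal{C}_{\mathfrak{m}}^{\sharp}}\mathit{\Omega}_{\mathfrak{m}(d_{a,-})}^{\sharp}$, then use Corollary \ref{corollary explosion l du mec copacabana final copy(1)} to identify each $\mathit{\Omega}_{\mathfrak{m}(d_{a,-})}^{\sharp}$ with $\mathit{\Omega}_{\mathfrak{m}}^{\sharp}(d_{a,-}+\mathrm{r}_{+}(d_{a,-}))$. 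Your additional discussion of the commutation of infima and of the identification of $\mathcal{C}_{\mathfrak{m}}^{\sharp}$ is accurate but already absorbed into the proofs of those two lemmas.
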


For many relevant models coming from Physics, like, for instance, BCS type
models, the set $\mathit{M}_{\Phi (c_{a})}$ contains exactly one state.
(Actually it is enough to have $|\mathit{M}_{\Phi (c_{a})}|=1$ for $%
c_{a}=d_{a,-}+\mathrm{r}_{+}(d_{a,-})$ with $d_{a,-}\in \mathcal{C}_{%
\mathfrak{m}}^{\sharp }$.) This special case has an interesting
interpretation in terms of game theory as explained in Section \ref{Section
effective theories} after Theorem \ref{th 3.36}. We conclude this section by
proving Theorem \ref{th 3.36}.

First, observe that, in this case, there is an injective and continuous map $%
d_{a,-}\mapsto \omega _{d_{a,-}}$ from $\mathcal{C}_{\mathfrak{m}}^{\sharp }$
to $\mathcal{E}_{1}$:

\begin{lemma}[Properties of the map $d_{a,-}\mapsto \hat{\protect\omega}%
_{d_{a,-}}$]
\label{coro 3.34}\mbox{ }\newline
For any $\mathfrak{m}\in \mathcal{M}_{1}$ and all $d_{a,-}\in \mathcal{C}_{%
\mathfrak{m}}^{\sharp }$, assume that $\mathit{M}_{\Phi
(d_{a,-}+r(d_{a,-}))} $ contains exactly one state denoted by $\hat{\omega}%
_{d_{a,-}}$. Then the map $d_{a,-}\mapsto \hat{\omega}_{d_{a,-}}$ from $%
\mathcal{C}_{\mathfrak{m}}^{\sharp }$ to $\mathcal{E}_{1}$ is injective and
continuous w.r.t. the weak topology on $\mathcal{C}_{\mathfrak{m}}^{\sharp }$
and the weak$^{\ast } $--topology on the set $\mathcal{E}_{1}$ of ergodic
states.
\end{lemma}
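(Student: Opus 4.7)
The plan is to establish three things in order: first, that each $\hat{\omega}_{d_{a,-}}$ really does belong to $\mathcal{E}_{1}$ so the codomain is correct; second, injectivity; and third, the claimed weak--norm/weak$^{\ast}$ continuity. The first point is short: by Lemma \ref{remark equilibrium state approches} the set $\mathit{M}_{\Phi(d_{a,-}+\mathrm{r}_{+}(d_{a,-}))}$ is a closed face of $E_{1}$, and any singleton face of a convex set consists of an extreme point; by Theorem \ref{theorem ergodic extremal} this extreme point of $E_{1}$ is ergodic, so $\hat{\omega}_{d_{a,-}}\in \mathcal{E}_{1}$. For injectivity, observe that by Lemma \ref{lemma explosion l du mec copacabana cas repulsif} applied to $\mathfrak{m}(d_{a,-})$ together with Corollary \ref{corollary explosion l du mec copacabana final copy(1)},
\[
\mathit{\Omega}_{\mathfrak{m}(d_{a,-})}^{\sharp}=\mathit{\Omega}_{\mathfrak{m}}^{\sharp}(d_{a,-}+\mathrm{r}_{+}(d_{a,-}))\subseteq \mathit{M}_{\Phi(d_{a,-}+\mathrm{r}_{+}(d_{a,-}))}=\{\hat{\omega}_{d_{a,-}}\},
\]
and this inclusion is an equality since the left--hand side is nonempty (Lemma \ref{lemma minimum sympa copy(1)}). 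The disjointness assertion in Corollary \ref{corollary explosion l du mec copacabana final copy(1)} then gives $\hat{\omega}_{d_{a,-}}\neq \hat{\omega}_{d_{a,-}^{\prime}}$ whenever $d_{a,-}\neq d_{a,-}^{\prime}$ in $\mathcal{C}_{\mathfrak{m}}^{\sharp}$.

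For continuity, let $\{d_{a,-}^{(n)}\}\subseteq \mathcal{C}_{\mathfrak{m}}^{\sharp}$ converge weakly to $d_{a,-}\in \mathcal{C}_{\mathfrak{m}}^{\sharp}$ and set $c_{a}^{(n)}:=d_{a,-}^{(n)}+\mathrm{r}_{+}(d_{a,-}^{(n)})$, $c_{a}:=d_{a,-}+\mathrm{r}_{+}(d_{a,-})$. By Lemma \ref{lemma idiot interaction approx 2 copy(3)}, $\mathrm{r}_{+}(d_{a,-}^{(n)})\to \mathrm{r}_{+}(d_{a,-})$ in the $L^{2}$--norm, so $c_{a}^{(n)}\to c_{a}$ weakly and $\{c_{a}^{(n)}\}$ is norm--bounded. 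Since $E_{1}$ is weak$^{\ast}$--compact metrizable (Theorem \ref{Metrizability}), it suffices to show that every weak$^{\ast}$ accumulation point $\omega_{\infty}$ of $\{\hat{\omega}_{d_{a,-}^{(n)}}\}$ equals $\hat{\omega}_{d_{a,-}}$. By the singleton assumption on $\mathit{M}_{\Phi(c_{a})}$, it is enough to prove $\omega_{\infty}\in \mathit{M}_{\Phi(c_{a})}$, i.e., $f_{\mathfrak{m}}(\omega_{\infty},c_{a})=\inf_{\rho \in E_{1}}f_{\mathfrak{m}}(\rho,c_{a})=-P_{\mathfrak{m}}(c_{a})$, with $f_{\mathfrak{m}}$ as in (\ref{free--energy density approximated 2}).

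Along the convergent subsequence I will take the $\liminf$ of the identity
\[
f_{\mathfrak{m}}(\hat{\omega}_{d_{a,-}^{(n)}},c_{a}^{(n)})=-P_{\mathfrak{m}}(c_{a}^{(n)})
\]
and combine three facts: (i) $-P_{\mathfrak{m}}(c_{a}^{(n)})\to -P_{\mathfrak{m}}(c_{a})$ by the weak continuity of the pressure on bounded balls in $L^{2}$ (Proposition \ref{corrolaire sympa} (ii)); (ii) $e_{\Phi}(\hat{\omega}_{d_{a,-}^{(n)}})\to e_{\Phi}(\omega_{\infty})$ by Lemma \ref{Th.en.func} (i) and $\liminf (-\beta^{-1}s(\hat{\omega}_{d_{a,-}^{(n)}}))\geq -\beta^{-1}s(\omega_{\infty})$ by weak$^{\ast}$--upper semi--continuity of $s$ (Lemma \ref{lemma property entropy} (i)); (iii) the key convergence of the long--range term. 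For (iii), set $u_{n}(a):=e_{\Phi_{a}}(\hat{\omega}_{d_{a,-}^{(n)}})+ie_{\Phi_{a}^{\prime}}(\hat{\omega}_{d_{a,-}^{(n)}})$ and $u(a):=e_{\Phi_{a}}(\omega_{\infty})+ie_{\Phi_{a}^{\prime}}(\omega_{\infty})$. For each $a$, $u_{n}(a)\to u(a)$ by weak$^{\ast}$--convergence of states, while $|u_{n}(a)|\leq \Vert \Phi_{a}\Vert_{\mathcal{W}_{1}}+\Vert \Phi_{a}^{\prime}\Vert_{\mathcal{W}_{1}}\in L^{2}(\mathcal{A})$ because $\mathfrak{m}\in \mathcal{M}_{1}$ (see (\ref{eq:enpersite bounded})); Lebesgue's dominated convergence theorem therefore yields $u_{n}\to u$ in $L^{2}$--norm. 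Combined with the weak convergence and boundedness of $c_{a}^{(n)}$, this gives
\[
\langle u_{n},\gamma_{a}c_{a}^{(n)}\rangle \to \langle u,\gamma_{a}c_{a}\rangle.
\]
Putting the three contributions together yields $\liminf f_{\mathfrak{m}}(\hat{\omega}_{d_{a,-}^{(n)}},c_{a}^{(n)})\geq f_{\mathfrak{m}}(\omega_{\infty},c_{a})$, hence $f_{\mathfrak{m}}(\omega_{\infty},c_{a})\leq -P_{\mathfrak{m}}(c_{a})$, which is the reverse of the trivial inequality and forces $\omega_{\infty}\in \mathit{M}_{\Phi(c_{a})}=\{\hat{\omega}_{d_{a,-}}\}$.

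The only subtle step is (iii): a priori $c_{a}^{(n)}$ converges only weakly and $u_{n}$ is just a bounded sequence, so the pairing need not converge. The dominated--convergence upgrade of $u_{n}$ to a norm--convergent sequence is what breaks this potential pairing of two merely weak limits, and it relies crucially on the $L^{2}$--integrability of $a\mapsto \Vert \Phi_{a}\Vert_{\mathcal{W}_{1}}$ built into the definition of $\mathcal{M}_{1}$. Everything else is standard semi--continuity bookkeeping.
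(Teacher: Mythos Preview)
Your proof is correct and follows essentially the same strategy as the paper's: ergodicity via the face property, injectivity via the disjointness in Corollary \ref{corollary explosion l du mec copacabana final copy(1)}, and continuity by showing that any weak$^{\ast}$ accumulation point of $\hat{\omega}_{d_{a,-}^{(n)}}$ must minimize $f_{\Phi(c_a)}$ and hence equal $\hat{\omega}_{d_{a,-}}$ by uniqueness.

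The only difference is in how the convergence of the free energies is obtained. The paper argues at the level of interactions: the map $c_{a}\mapsto \Phi(c_{a})$ from $L^{2}(\mathcal{A},\mathbb{C})$ (weak) to $\mathcal{W}_{1}$ (norm) is continuous (it is a norm limit of finite--rank maps coming from step functions, hence compact), so $\Phi(c_{a}^{(n)})\to \Phi(c_{a})$ in $\mathcal{W}_{1}$; norm continuity of the pressure (Theorem \ref{BCS main theorem 1} (ii)) and the uniform bound $|f_{\Phi(c_{a}^{(n)})}(\rho)-f_{\Phi(c_{a})}(\rho)|\leq \Vert \Phi(c_{a}^{(n)})-\Phi(c_{a})\Vert_{\mathcal{W}_{1}}$ then do the rest. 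You instead unpack this at the level of states, proving $u_{n}\to u$ in $L^{2}$--norm via dominated convergence and pairing with the weakly convergent $c_{a}^{(n)}$. These are the same compactness phenomenon viewed from two sides; your version is a bit more explicit about where the $L^{2}$--integrability of $a\mapsto \Vert \Phi_{a}\Vert_{\mathcal{W}_{1}}$ enters, while the paper's packaging makes the lower semi--continuity step slightly shorter since the energy correction is uniform in $\rho$.
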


\begin{proof}
By the assumptions, $\hat{\omega}_{d_{a,-}}$ is ergodic as $\mathit{M}_{\Phi
(d_{a,-}+\mathrm{r}_{+}(d_{a,-}))}$ is always a face of $E_{1}$, see Lemma %
\ref{remark equilibrium state approches}. If $d_{a,-}\not=d_{a,-}^{\prime }$
then $\hat{\omega}_{d_{a,-}^{\prime }}\not=\hat{\omega}_{d_{a,-}}$because of
Corollary \ref{corollary explosion l du mec copacabana final copy(1)}. Thus
the map $d_{a,-}\mapsto \hat{\omega}_{d_{a,-}}$ is injective. The Hilbert
space $L^{2}(\mathcal{A},\mathbb{C})$ is separable and $\mathcal{C}_{%
\mathfrak{m}}^{\sharp }$ is weakly compact and, therefore, closed in the
weak topology. By Theorem \ref{Metrizability}, the weak topology in $%
\mathcal{C}_{\mathfrak{m}}^{\sharp }$ is metrizable and we can restrict \
ourself to sequences instead of more general nets.

Take any sequence $\{d_{a,-}^{(n)}\}_{n=0}^{\infty }\subseteq \mathcal{C}_{%
\mathfrak{m}}^{\sharp }$ converging in the weak topology to $d_{a,-}\in 
\mathcal{C}_{\mathfrak{m}}^{\sharp }$ as $n\rightarrow \infty $. The
thermodynamic decision rule $\mathrm{r}_{+}$ is weak--norm continuous, by
Lemma \ref{lemma idiot interaction approx 2 copy(3)}, and, from the
definition of $\Phi (c_{a})$, the map $c_{a}\mapsto \Phi (c_{a})$ from $%
L^{2}(\mathcal{A},\mathbb{C})$ to $\mathcal{W}_{1}$ is continuous w.r.t. the
weak topology of $L^{2}(\mathcal{A},\mathbb{C})$ and the norm topology of $%
\mathcal{W}_{1}$. It follows that the sequence 
\begin{equation*}
\left\{ \Phi (d_{a,-}^{(n)}+\mathrm{r}_{+}(d_{a,-}^{(n)}))\right\}
_{n=0}^{\infty }\subseteq \mathcal{W}_{1}
\end{equation*}%
converges in norm to $\Phi (d_{a,-}+\mathrm{r}_{+}(d_{a,-}))\in \mathcal{W}%
_{1}$. The map $\Phi \mapsto \mathrm{P}_{(\Phi ,0,0)}^{\sharp }$ from $%
\mathcal{W}_{1}$ to $\mathbb{R}$ is (norm) continuous, by Theorem \ref{BCS
main theorem 1} (ii). Therefore, 
\begin{equation}
\mathrm{P}_{(\Phi (d_{a,-}+\mathrm{r}_{+}(d_{a,-})),0,0)}^{\sharp }=\underset%
{n\rightarrow \infty }{\lim }\mathrm{P}_{(\Phi (d_{a,-}^{(n)}+\mathrm{r}%
_{+}(d_{a,-}^{(n)})),0,0)}^{\sharp }.  \label{coro 3.34 eq1}
\end{equation}%
By Theorem \ref{BCS main theorem 1} (i) and Lemma \ref{remark equilibrium
state approches}, 
\begin{equation*}
\mathrm{P}_{(\Phi (d_{a,-}^{(n)}+\mathrm{r}_{+}(d_{a,-}^{(n)})),0,0)}^{%
\sharp }=f_{\mathfrak{m}}^{\sharp }(\hat{\omega}_{d_{a,-}^{(n)}})
\end{equation*}%
with $\hat{\omega}_{d_{a,-}^{(n)}}\in \mathit{M}_{\Phi (d_{a,-}^{(n)}+%
\mathrm{r}_{+}(d_{a,-}^{(n)}))}$. Combined with (\ref{coro 3.34 eq1}) and
Lemma \ref{remark equilibrium state approches} for the t.i. interaction $%
\Phi (d_{a,-}+\mathrm{r}_{+}(d_{a,-}))$, the last equality implies that any
accumulation point of the sequence $\{\hat{\omega}_{d_{a,-}^{(n)}}\}_{n=0}^{%
\infty }$ converges in the weak$^{\ast }$--topology to a t.i. equilibrium
state $\hat{\omega}_{d_{a,-}}\in \mathit{M}_{\Phi (d_{a,-}+\mathrm{r}%
_{+}(d_{a,-}))}$ which is assumed to be unique and is thus ergodic. 
\end{proof}%

Notice that, by Lemma \ref{lemma explosion l du mec copacabana1 copy(2)} (i)
and Corollary \ref{corollary explosion l du mec copacabana final copy(1)},
for all $d_{a,-}\in \mathcal{C}_{\mathfrak{m}}^{\sharp }$, the sets $\mathit{%
\Omega }_{\mathfrak{m}}^{\sharp }(d_{a,-}+\mathrm{r}_{+}(d_{a,-}))$ are
never empty. As a consequence, by Theorem \ref{theorem structure of omega
copy(1)} (ii), the map $d_{a,-}\mapsto \omega _{d_{a,-}}$ of Lemma \ref{coro
3.34} is bijective from $\mathcal{C}_{\mathfrak{m}}^{\sharp }$ to the set $%
\mathcal{E}(\mathit{\Omega }_{\mathfrak{m}}^{\sharp })$ of extreme
generalized t.i. equilibrium states. Since $\mathcal{C}_{\mathfrak{m}%
}^{\sharp }$ is weakly compact, it is a homeomorphism:

\begin{corollary}[The map $d_{a,-}\mapsto \protect\omega _{d_{a,-}}$ from $%
\mathcal{C}_{\mathfrak{m}}^{\sharp }$ to $\mathcal{E}(\mathit{\Omega }_{%
\mathfrak{m}}^{\sharp })$]
\label{coro 3.34 copy(1)}\mbox{ }\newline
For any $\mathfrak{m}\in \mathcal{M}_{1}$ and all $d_{a,-}\in \mathcal{C}_{%
\mathfrak{m}}^{\sharp }$, assume that $\mathit{M}_{\Phi
(d_{a,-}+r(d_{a,-}))} $ contains exactly one state denoted by $\omega
_{d_{a,-}}$. Then the map $d_{a,-}\mapsto \omega _{d_{a,-}}$ from $\mathcal{C%
}_{\mathfrak{m}}^{\sharp }$ to $\mathcal{E}_{1}$ defines a homeomorphism
between $\mathcal{C}_{\mathfrak{m}}^{\sharp }$ and $\mathcal{E}(\mathit{%
\Omega }_{\mathfrak{m}}^{\sharp })$ w.r.t. the weak topology in $\mathcal{C}%
_{\mathfrak{m}}^{\sharp }$ and the weak$^{\ast }$--topology in the set $%
\mathcal{E}(\mathit{\Omega }_{\mathfrak{m}}^{\sharp })$. In particular, $%
\mathcal{E}(\mathit{\Omega }_{\mathfrak{m}}^{\sharp })$ is weak$^{\ast }$%
--compact.
\end{corollary}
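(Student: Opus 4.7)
The plan is to combine Lemma \ref{coro 3.34} with Theorem \ref{theorem structure of omega} (ii) and a standard compactness argument. Lemma \ref{coro 3.34} already supplies that $d_{a,-}\mapsto \omega_{d_{a,-}}$ is well defined, injective, and continuous from the weak topology on $\mathcal{C}_{\mathfrak{m}}^{\sharp}$ to the weak$^{\ast}$--topology on $\mathcal{E}_{1}$. So what remains is to identify its image with $\mathcal{E}(\mathit{\Omega}_{\mathfrak{m}}^{\sharp})$, and to upgrade the continuous bijection to a homeomorphism.

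For the image, I would first show the inclusion $\{\omega_{d_{a,-}}:d_{a,-}\in \mathcal{C}_{\mathfrak{m}}^{\sharp}\}\subseteq \mathcal{E}(\mathit{\Omega}_{\mathfrak{m}}^{\sharp})$. By Corollary \ref{corollary explosion l du mec copacabana final copy(1)}, for every $d_{a,-}\in \mathcal{C}_{\mathfrak{m}}^{\sharp}$ the set $\mathit{\Omega}_{\mathfrak{m}(d_{a,-})}^{\sharp}=\mathit{\Omega}_{\mathfrak{m}}^{\sharp}(d_{a,-}+\mathrm{r}_{+}(d_{a,-}))$ is contained in $\mathit{M}_{\Phi(d_{a,-}+\mathrm{r}_{+}(d_{a,-}))}$, so by the hypothesis it reduces to the single ergodic state $\omega_{d_{a,-}}$. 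Via Theorem \ref{theorem explosion l du mec copacabana final} and Lemma \ref{lemma minimum sympa copy(2)} (ii), this yields $\omega_{d_{a,-}}\in \mathit{\hat{M}}_{\mathfrak{m}}\subseteq \mathit{\Omega}_{\mathfrak{m}}^{\sharp}$; since $\omega_{d_{a,-}}\in \mathcal{E}_{1}$ is extreme already in the larger set $E_{1}\supseteq \mathit{\Omega}_{\mathfrak{m}}^{\sharp}$, it is a fortiori extreme in $\mathit{\Omega}_{\mathfrak{m}}^{\sharp}$. The converse inclusion is provided directly by Theorem \ref{theorem structure of omega} (ii): any $\hat{\omega}\in \mathcal{E}(\mathit{\Omega}_{\mathfrak{m}}^{\sharp})$ lies in $\mathcal{E}(\mathit{\Omega}_{\mathfrak{m}}^{\sharp}(d_{a,-}+\mathrm{r}_{+}(d_{a,-})))$ for some $d_{a,-}\in \mathcal{C}_{\mathfrak{m}}^{\sharp}$, and by the hypothesis this extreme set equals $\{\omega_{d_{a,-}}\}$, so $\hat{\omega}=\omega_{d_{a,-}}$.

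With bijectivity onto $\mathcal{E}(\mathit{\Omega}_{\mathfrak{m}}^{\sharp})$ established, the homeomorphism statement follows from the standard fact that a continuous bijection from a compact space to a Hausdorff space is a homeomorphism. Here $\mathcal{C}_{\mathfrak{m}}^{\sharp}$ is weakly compact by Lemma \ref{lemma idiot interaction approx 2} ($\sharp$), and $\mathcal{E}(\mathit{\Omega}_{\mathfrak{m}}^{\sharp})\subseteq \mathcal{U}^{\ast}$ inherits the Hausdorff property from the weak$^{\ast}$--topology (cf.\ Corollary \ref{thm locally convex spacebis}). The weak$^{\ast}$--compactness of $\mathcal{E}(\mathit{\Omega}_{\mathfrak{m}}^{\sharp})$ is then immediate as the continuous image of the weakly compact set $\mathcal{C}_{\mathfrak{m}}^{\sharp}$.

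The only delicate point I expect is making sure the ergodicity of $\omega_{d_{a,-}}$ is really available: this uses that $\mathit{M}_{\Phi(d_{a,-}+\mathrm{r}_{+}(d_{a,-}))}$ is a face of $E_{1}$ by Lemma \ref{remark equilibrium state approches}, so its unique element must be extreme in $E_{1}$, hence ergodic by Theorem \ref{theorem ergodic extremal}. Everything else is bookkeeping: invoking Theorem \ref{theorem structure of omega} (ii) to pin down $\mathcal{E}(\mathit{\Omega}_{\mathfrak{m}}^{\sharp})$, and invoking the compact-to-Hausdorff principle to obtain the homeomorphism and the final compactness assertion.
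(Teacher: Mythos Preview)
Your proof is correct and follows essentially the same route as the paper. The paper's argument (in the paragraph preceding the corollary) is terser: it invokes Lemma \ref{lemma explosion l du mec copacabana1 copy(2)} (i) and Corollary \ref{corollary explosion l du mec copacabana final copy(1)} for non-emptiness of each $\mathit{\Omega}_{\mathfrak{m}}^{\sharp}(d_{a,-}+\mathrm{r}_{+}(d_{a,-}))$, Theorem \ref{theorem structure of omega copy(1)} (ii) for bijectivity onto $\mathcal{E}(\mathit{\Omega}_{\mathfrak{m}}^{\sharp})$, and then weak compactness of $\mathcal{C}_{\mathfrak{m}}^{\sharp}$ for the homeomorphism---exactly your compact-to-Hausdorff argument. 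Your version spells out the two inclusions for the image more explicitly (in particular the step ``ergodic $\Rightarrow$ extreme in $E_{1}$ $\Rightarrow$ extreme in $\mathit{\Omega}_{\mathfrak{m}}^{\sharp}$'', which the paper establishes separately in the proof of Theorem \ref{theorem omega simplex} ($\exists !$)), and you cite Theorem \ref{theorem structure of omega} (ii) rather than Theorem \ref{theorem structure of omega copy(1)} (ii) for the reverse inclusion, but under the hypothesis these yield the same conclusion.
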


Consequently, any continuous function $f\in C(\mathcal{E}(\mathit{\Omega }_{%
\mathfrak{m}}^{\sharp }))$ can be identified with a continuous function $%
g\in C(\mathcal{C}_{\mathfrak{m}}^{\sharp })$ through the prescription $%
g(d_{a,-}):=f(\omega _{d_{a,-}})$. This map $C(\mathcal{E}(\mathit{\Omega }_{%
\mathfrak{m}}^{\sharp }))\rightarrow C(\mathcal{C}_{\mathfrak{m}}^{\sharp })$
clearly defines an isomorphism of $C^{\ast }$--algebras. Therefore, by
combining this with Theorems \ref{theorem Bauer} and \ref{theorem omega
simplex}, we obtain Theorem \ref{th 3.36}.

\section{Breakdown of effective local theories\label{section breaking theroy}%
\index{Theory!local!breakdown}}

The fact that the approximating Hamiltonian method (Section \ref{Section
approx method}) leads to the correct pressure (cf. Theorem \ref{theorem
saddle point} ($\sharp $)) does not mean that the min--max local theory $%
\mathfrak{T}_{\mathfrak{m}}^{\sharp }$ (Definition \ref{definition effective
theories bogo}) is an effective theory for $\mathfrak{m}\in \mathcal{M}_{1}$%
. In fact, we prove the existence of uncountably many models $\mathfrak{m}%
\in \mathcal{M}_{1}$ having no effective local theory.

The construction of such models uses the fact, first observed by Israel \cite%
[Theorem V.2.2.]{Israel} for lattice spin systems with purely local
interactions, that any finite set of extreme t.i. states can be seen as t.i.
equilibrium states of some t.i. interaction\ $\Phi \in \mathcal{W}_{1}$:

\begin{lemma}[Ergodic states as t.i. equilibrium states]
\label{corolaire Bishop phelps1 copy(1)}\mbox{ }\newline
For any finite subset $\left\{ 
\hat{\rho}_{1},\ldots ,\hat{\rho}_{n}\right\} \subseteq \mathcal{E}_{1}$ of
ergodic states, there is $\Phi \in \mathcal{W}_{1}$ such that $\left\{ \hat{%
\rho}_{1},\ldots ,\hat{\rho}_{n}\right\} \subseteq \mathit{M}_{\Phi }$.
\end{lemma}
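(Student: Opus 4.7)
The plan follows Israel's classical argument \cite[Theorem V.2.2]{Israel} in spirit, reducing the problem to finding a single interaction $\Phi \in \mathcal{W}_1$ for which the barycentric state
\[
\bar{\omega} := \frac{1}{n}\sum_{i=1}^{n}\hat{\rho}_i \in E_1
\]
belongs to $\mathit{M}_\Phi$, and then exploiting the face structure of $\mathit{M}_\Phi$ to automatically distribute membership to each $\hat{\rho}_i$. Indeed, for any $\Phi \in \mathcal{W}_1$ (equivalently, for $\mathfrak{m} = (\Phi,0,0) \in \mathcal{M}_1$), the set $\mathit{M}_\Phi = \mathit{\Omega }_\Phi$ is a non-empty closed face of the Poulsen simplex $E_1$ (cf.\ the discussion around equation~(\ref{equivalence def equilibrium states}) together with Lemma~\ref{lemma minimum sympa copy(3)}). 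Since $\bar{\omega}$ is a convex combination of ergodic states with all weights $1/n>0$, the defining property of a face forces $\hat{\rho}_i \in \mathit{M}_\Phi$ for every $i$ as soon as $\bar{\omega} \in \mathit{M}_\Phi$, so the lemma would be settled.

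The substantive task is therefore the existence of such a $\Phi$. Via the isometric affine homeomorphism $\mathbb{T}: E_1 \hookrightarrow \mathcal{W}_1^{\ast}$ of Lemma~\ref{lemma.T} and the identification $\mathit{M}_\Phi = \mathit{T}_\Phi$ with the set of continuous tangent functionals at $\Phi$ to the convex, locally Lipschitz continuous pressure map $\Phi' \mapsto \mathrm{P}^\sharp_{(\Phi',0,0)}$ (Theorem~\ref{eq.tang.bcs.type} combined with Theorem~\ref{BCS main theorem 1}~(ii)), the statement $\bar{\omega} \in \mathit{M}_\Phi$ translates into $\mathbb{T}(\bar{\omega})$ lying in the range of the subdifferential $\partial \mathrm{P}$. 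A biconjugation argument, leveraging that $s$ is affine and weak-$\ast$ upper semi-continuous on $E_1$ (Lemma~\ref{lemma property entropy}~(i)), identifies the Legendre--Fenchel transform as $\mathrm{P}^{\ast} = -\beta^{-1}s$ on $E_1$ (and $+\infty$ elsewhere), and Lemma~\ref{lemma property entropy}~(iii) ensures that $\mathrm{P}^{\ast}$ is norm--Lipschitz on $E_1$. Bishop--Phelps theorem then guarantees that $\mathrm{range}(\partial \mathrm{P})$ is norm--dense in $\mathrm{dom}(\mathrm{P}^{\ast}) = E_1$, producing approximating sequences $\Phi_k \in \mathcal{W}_1$ and $\omega_k \in \mathit{M}_{\Phi_k}$ with $\omega_k \to \bar{\omega}$ in norm.

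The main obstacle is upgrading this Bishop--Phelps density statement to the exact realization $\bar{\omega} \in \mathrm{range}(\partial \mathrm{P})$, since $\mathcal{W}_1$ is not reflexive and a priori the approximating tangent states $\omega_k$ need not coincide with $\bar{\omega}$ for any finite $k$. Here the finiteness of the ergodic support of $\bar{\omega}$ is decisive: by Theorem~\ref{theorem choquet} the unique representing probability measure is $\mu_{\bar{\omega}} = n^{-1}\sum_{i=1}^{n}\delta_{\hat{\rho}_i}$, so the ergodic decomposition is concentrated on a finite affinely independent set. Combined with a Br\o nsted--Rockafellar refinement of Bishop--Phelps and the observation that the energy-density functionals $\{e_\Phi\}_{\Phi \in \mathcal{W}_1}$ separate these finitely many extreme states, one can steer the approximating pair $(\Phi_k,\omega_k)$ to a genuine realization of $\bar{\omega}$ as a tangent functional to $\mathrm{P}$, exactly as Israel performs in the classical setting of \cite[Theorem V.2.2]{Israel}. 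Once $\Phi \in \mathcal{W}_1$ with $\bar{\omega} \in \mathit{M}_\Phi$ is in hand, the face property from the first paragraph closes the argument.
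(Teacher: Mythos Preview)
Your overall strategy---Bishop--Phelps, Choquet, face property of $\mathit{M}_\Phi$---matches the paper's (and Israel's) approach, but you misassemble the pieces in the final step. The gap is your insistence on \emph{exact realization} $\bar\omega\in\mathit{M}_\Phi$. Bishop--Phelps/Br{\o}nsted--Rockafellar only yield $\omega'\in\mathit{M}_{\Phi'}$ with $\Vert\omega'-\bar\omega\Vert<\epsilon$; your claim that one can ``steer'' to $\omega'=\bar\omega$ using finiteness of support and separation by energy densities is not substantiated, and in fact this is \emph{not} what Israel does in \cite[Theorem V.2.2]{Israel}. (Not every t.i.\ state is an equilibrium state for some $\Phi\in\mathcal{W}_1$, so exact realization can fail in general.)

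The correct closing move---which is why the paper explicitly cites the Choquet theorem---uses the \emph{norm isometry} $\Vert\rho-\rho'\Vert=\Vert\mu_\rho-\mu_{\rho'}\Vert$ from Theorem~\ref{theorem choquet}. Choose $\epsilon<1/n$; then Bishop--Phelps gives $\omega'\in\mathit{M}_{\Phi'}$ with $\Vert\mu_{\omega'}-\mu_{\bar\omega}\Vert<1/n$. Since $\mu_{\bar\omega}(\{\hat\rho_i\})=1/n$ and singletons are Borel in the metrizable set $E_1$, one gets $\mu_{\omega'}(\{\hat\rho_i\})>0$ for every $i$. Now apply the face property to $\omega'$ (not $\bar\omega$): since $\mathit{M}_{\Phi'}$ is a closed face of $E_1$ and $\omega'\in\mathit{M}_{\Phi'}$, the support of $\mu_{\omega'}$ lies in $\mathit{M}_{\Phi'}\cap\mathcal{E}_1$, forcing each $\hat\rho_i\in\mathit{M}_{\Phi'}$. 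This is Israel's argument verbatim, and it never requires $\bar\omega$ itself to be a tangent functional.
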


\begin{proof}%
For any $\Phi \in \mathcal{W}_{1}$, recall that the map%
\begin{equation*}
\rho \mapsto f_{\Phi }\left( \rho \right) :=e_{\Phi }(\rho )-\beta
^{-1}s(\rho )
\end{equation*}%
is weak$^{\ast }$--lower semi--continuous and affine, see Lemmata \ref{lemma
property entropy} (i), \ref{Th.en.func} (i) and Definition \ref{Remark free
energy density}. In particular, $\mathit{\Omega }_{\Phi }=\mathit{M}_{\Phi }$
is the (non--empty) set of all t.i. minimizers which is a closed face of $%
E_{1}$. Therefore, the lemma follows from Bishop--Phelps' theorem \cite[%
Theorem V.1.1.]{Israel} together with the Choquet theorem%
\index{Choquet theorem} (Theorem \ref{theorem choquet}) and Theorem \ref%
{eq.tang.bcs.type} for $\mathfrak{m}=(\Phi ,0,0)$. The arguments are \emph{%
exactly} those of Israel and we recommend \cite[Theorem V.2.2. (a)]{Israel}
for more details. 
\end{proof}%

Using this last lemma, we can then construct uncountably many models $%
\mathfrak{m}\in \mathcal{M}_{1}$ such that its set $\mathit{\Omega }_{%
\mathfrak{m}}^{\sharp }$ of generalized t.i. equilibrium states is not a
face of $E_{1}$.

\begin{lemma}[The set $\mathit{\Omega }_{\mathfrak{m}}^{\sharp }$ is
generally not a face]
\label{lemma explosion l du mec copacabana2}\mbox{ }\newline
There are uncountably many $\mathfrak{m}\in \mathcal{M}_{1}$ for which $%
\mathit{\Omega }_{\mathfrak{m}}^{\sharp }$ is not a face of $E_{1}$.
\end{lemma}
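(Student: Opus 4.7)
The plan is to produce such models within the purely repulsive class, for which Theorem \ref{theorem purement repulsif sympa} ($+$) already identifies $\mathit{\Omega}_{\mathfrak{m}}^{\sharp}$ with $\mathit{\hat{M}}_{\mathfrak{m}}$, the set of minimizers of the convex functional $g_{\mathfrak{m}}$. The idea is to engineer $\mathfrak{m}$ so that a convex combination of two distinct ergodic states lies in $\mathit{\hat{M}}_{\mathfrak{m}}$ while the ergodic components themselves do not, which is precisely the obstruction to being a face.

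First I would pick two distinct ergodic states $\hat{\rho}_{1},\hat{\rho}_{2}\in \mathcal{E}_{1}$ (abundantly available by Corollary \ref{lemma density of extremal points}) and a self--adjoint element $A\in \mathcal{U}_{0}$ with $\hat{\rho}_{1}(A)=1$ and $\hat{\rho}_{2}(A)=-1$; such $A$ exists by density of $\mathcal{U}_{0}$ in $\mathcal{U}$ together with $\hat{\rho}_{1}\neq \hat{\rho}_{2}$ and a suitable affine rescaling. Using Lemma \ref{corolaire Bishop phelps1 copy(1)}, I would then produce $\Phi \in \mathcal{W}_{1}$ with $\{\hat{\rho}_{1},\hat{\rho}_{2}\}\subseteq \mathit{M}_{\Phi }$, so that the whole segment $[\hat{\rho}_{1},\hat{\rho}_{2}]$ lies in $\mathit{M}_{\Phi }$ by affinity of $f_{\Phi }$ (Lemma \ref{lemma minimum sympa copy(3)}). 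Next, the construction in the proof of Lemma \ref{lemma.T} furnishes a finite--range t.i.\ interaction $\Phi _{a}\in \mathcal{W}_{1}^{\mathrm{f}}$ with $\rho (\mathfrak{e}_{\Phi _{a}})=\rho (A)$ for all $\rho \in E_{1}$. Taking $\mathcal{A}=\{\ast \}$ a singleton with $\mathfrak{a}(\{\ast \})=1$ and $\gamma _{\ast }=+1$, I would set
\begin{equation*}
\mathfrak{m}:=(\Phi ,\{\Phi _{a}\},\{0\})\in \mathcal{M}_{1},
\end{equation*}
which is purely repulsive since $\Phi _{a,-}=\Phi _{a,-}^{\prime }=0$.

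For this model, $g_{\mathfrak{m}}(\rho )=|\rho (A)|^{2}+f_{\Phi }(\rho )$ on $E_{1}$, and Theorem \ref{theorem purement repulsif sympa} ($+$) gives $\mathit{\Omega}_{\mathfrak{m}}^{\sharp}=\mathit{\hat{M}}_{\mathfrak{m}}$. Set $\rho ^{\ast }:=\tfrac{1}{2}\hat{\rho}_{1}+\tfrac{1}{2}\hat{\rho}_{2}\in \mathit{M}_{\Phi }$. Since $\rho ^{\ast }(A)=0$ and $f_{\Phi }$ is affine on $\mathit{M}_{\Phi }$, one finds $g_{\mathfrak{m}}(\rho ^{\ast })=\min_{\rho \in E_{1}}f_{\Phi }(\rho )$. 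For any $\rho \in E_{1}$ one has $g_{\mathfrak{m}}(\rho )\geq f_{\Phi }(\rho )\geq \min f_{\Phi }=g_{\mathfrak{m}}(\rho ^{\ast })$, so $\rho ^{\ast }\in \mathit{\hat{M}}_{\mathfrak{m}}=\mathit{\Omega}_{\mathfrak{m}}^{\sharp}$. On the other hand, $g_{\mathfrak{m}}(\hat{\rho}_{1})=1+\min f_{\Phi }>g_{\mathfrak{m}}(\rho ^{\ast })$, so $\hat{\rho}_{1}\notin \mathit{\Omega}_{\mathfrak{m}}^{\sharp}$. This exhibits $\rho ^{\ast }\in \mathit{\Omega}_{\mathfrak{m}}^{\sharp}$ with a convex decomposition in $E_{1}$ one of whose components lies outside $\mathit{\Omega}_{\mathfrak{m}}^{\sharp}$, proving that $\mathit{\Omega}_{\mathfrak{m}}^{\sharp}$ is not a face of $E_{1}$.

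To obtain uncountably many such models, I would introduce the one--parameter family $\mathfrak{m}_{\lambda }:=(\Phi ,\{\lambda \Phi _{a}\},\{0\})$ for $\lambda >0$; the argument above applies verbatim since $\rho ^{\ast }(A)=0$ still gives $g_{\mathfrak{m}_{\lambda }}(\rho ^{\ast })=\min f_{\Phi }$ while $g_{\mathfrak{m}_{\lambda }}(\hat{\rho}_{1})=\lambda ^{2}+\min f_{\Phi }>\min f_{\Phi }$. The models $\mathfrak{m}_{\lambda }$ are genuinely distinct as elements of $\mathcal{M}_{1}$ because $\Vert \mathfrak{m}_{\lambda }-\mathfrak{m}_{\mu }\Vert _{\mathcal{M}_{1}}=|\lambda -\mu |\,\Vert \Phi _{a}\Vert _{2}>0$ for $\lambda \neq \mu $. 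The only mildly delicate point in the plan is the uniform control that $g_{\mathfrak{m}}(\rho )\geq g_{\mathfrak{m}}(\rho ^{\ast })$ for $\rho \notin \mathit{M}_{\Phi }$, but this is immediate from the non--negativity of $|\rho (A)|^{2}$ combined with the defining property of $\mathit{M}_{\Phi }$ as the set where $f_{\Phi }$ attains its minimum; no further estimate is needed.
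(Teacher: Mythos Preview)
Your proof is correct and follows essentially the same strategy as the paper: build a purely repulsive model so that $\mathit{\Omega}_{\mathfrak{m}}^{\sharp}=\mathit{\hat{M}}_{\mathfrak{m}}$ via Theorem \ref{theorem purement repulsif sympa} ($+$), arrange for two ergodic states with opposite values of a local observable to lie in $\mathit{M}_{\Phi}$ via Lemma \ref{corolaire Bishop phelps1 copy(1)}, and observe that the midpoint minimizes $g_{\mathfrak{m}}=|\rho(A)|^{2}+f_{\Phi}$ while the endpoints do not. The only noteworthy differences are cosmetic: the paper manufactures the second ergodic state as $\hat{\rho}_{2}:=\hat{\rho}_{1}\circ\sigma_{\theta}$ using a gauge rotation and an element $A$ with $\sigma_{\theta}(A)=-A$ (so that $\hat{\rho}_{1}(A)=-\hat{\rho}_{2}(A)$ is automatic), whereas you pick $\hat{\rho}_{1},\hat{\rho}_{2}$ arbitrarily and rescale $A$ affinely; and the paper obtains uncountably many models by varying $A$ over an uncountable subset of $\mathcal{U}_{0}$, while you vary a coupling constant $\lambda$. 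One small point you leave implicit: the interaction $\Phi_{a}$ built from $A$ via Lemma \ref{lemma.T} requires $A\in\mathcal{U}^{+}$, but this is harmless since t.i.\ states are even (Lemma \ref{coro.even}) and you may replace $A$ by its even part without changing $\hat{\rho}_{i}(A)$.
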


\begin{proof}%
Let 
\begin{equation*}
\mathcal{U}^{-}\subseteq \mathcal{U}_{0}\backslash \{A\in \mathcal{U}%
_{0}\;:\;A=A^{\ast }\}
\end{equation*}%
be the (non--empty) set of non self--adjoint local elements of the $\ast $%
--algebra $\mathcal{U}_{0}$ defined by 
\begin{equation*}
\mathcal{U}^{-}:=\underset{\theta \in \mathbb{R}/(2\pi \mathbb{Z)}}{\bigcup }%
\{A\in \mathcal{U}_{0}\;:\;A=-\sigma _{\theta }(A),\;\rho (A)\neq 0\;\mathrm{%
\;for\;some\;}\rho \in E_{1}\}
\end{equation*}%
with $\sigma _{\theta }$ being the automorphism of the algebra $\mathcal{U}$
defined by (\ref{definition of gauge}). Since, for any $x,y\in \mathfrak{L}$
with $x\neq y$, any $\mathrm{s}\in \mathrm{S}$, and any $\lambda \in \mathbb{%
R}\backslash \{0\}$, we have $\lambda a_{x,\mathrm{s}}a_{y,\mathrm{s}}\in 
\mathcal{U}^{-}$, the set $\mathcal{U}^{-}$ contains uncountably many
elements.

By assumption, for any $A\in \mathcal{U}^{-}$, there is $%
\hat{\rho}_{1}\in E_{1}$ such that $\hat{\rho}_{1}(A)\neq 0$. By density of
the set $\mathcal{E}_{1}$ of extreme points of $E_{1}$ (Corollary \ref{lemma
density of extremal points}), we can assume without loss of generality that $%
\hat{\rho}_{1}\in \mathcal{E}_{1}$. As $A\in \mathcal{U}^{-}$, there is $%
\theta \in \mathbb{R}/(2\pi \mathbb{Z)}$ such that 
\begin{equation}
\hat{\rho}_{1}\left( A\right) =-\hat{\rho}_{2}\left( A\right) \neq 0
\label{mec copa eq1}
\end{equation}%
with $\hat{\rho}_{2}:=\hat{\rho}_{1}\circ \sigma _{\theta }$. Since $\sigma
_{\theta }$ is an automorphism of $\mathcal{U}$, $\hat{\rho}_{2}\neq \hat{%
\rho}_{1}$ is clearly a state. As $\hat{\rho}_{1}\in \mathcal{E}_{1}$, by
using Theorem \ref{theorem ergodic extremal} and $\alpha _{x}\circ \sigma
_{\theta }=\sigma _{\theta }\circ \alpha _{x}$, we have that $\hat{\rho}%
_{2}\in \mathcal{E}_{1}$ . Now, by Lemma \ref{corolaire Bishop phelps1
copy(1)}, there is $\Phi \in \mathcal{W}_{1}$ such that $\left\{ \hat{\rho}%
_{1},\hat{\rho}_{2}\right\} \subseteq \mathit{M}_{\Phi }$.

Any non self--adjoint local element $A\in \mathcal{U}^{-}\subseteq \mathcal{U%
}_{0}$ can be decomposed as $A=A_{R}+iA_{I}$ with $A_{R}=A_{R}^{\ast }\in 
\mathcal{U}_{0}$ and $A_{I}=A_{I}^{\ast }\in \mathcal{U}_{0}$. Thus, as
explained in the proof of Lemma \ref{lemma.T}, there exists two finite range
t.i. interactions $\Phi ^{A_{R}},\Phi ^{A_{I}}\in \mathcal{W}_{1}$ with $%
\Vert \Phi ^{A_{R}}\Vert _{\mathcal{W}_{1}}=\Vert A_{R}\Vert $ and $\Vert
\Phi ^{A_{I}}\Vert _{\mathcal{W}_{1}}=\Vert A_{I}\Vert $ such that%
\begin{equation}
\rho (A)=e_{\Phi ^{A_{R}}}(\rho )+ie_{\Phi ^{A_{I}}}(\rho )
\label{mec copa eq2}
\end{equation}%
for any $\rho \in E_{1}$. For any $A\in \mathcal{U}^{-}$, we define the
discrete model 
\begin{equation*}
\mathfrak{m}_{A}:=\left( \Phi ,\Phi ^{A_{R}},\Phi ^{A_{I}}\right) \in 
\mathcal{M}_{1}
\end{equation*}%
without long--range attractions, i.e., $\Phi _{a,-}=\Phi _{a,-}^{\prime }=0$%
, $\Phi _{a,+}:=\Phi ^{A_{R}}$, and $\Phi _{a,+}^{\prime }:=\Phi ^{A_{I}}$,
see Definition \ref{long range attraction-repulsion}.

As $\left\{ \hat{\rho}_{1},\hat{\rho}_{2}\right\} \subseteq \mathit{M}_{\Phi
}$ and by convexity of the set $\mathit{M}_{\Phi }$, 
\begin{equation}
\omega :=\frac{1}{2}\hat{\rho}_{1}+\frac{1}{2}\hat{\rho}_{2}\in \mathit{M}%
_{\Phi }.  \label{mec copa eq3}
\end{equation}%
It follows from Definition \ref{Reduced free energy} that%
\begin{equation}
g_{\mathfrak{m}_{A}}\left( \omega \right) =f_{\Phi }\left( \omega \right)
<g_{\mathfrak{m}_{A}}\left( \hat{\rho}_{1}\right) =g_{\mathfrak{m}%
_{A}}\left( \hat{\rho}_{2}\right)  \label{mec copa eq4}
\end{equation}%
because of (\ref{mec copa eq1}) and (\ref{mec copa eq2}). Therefore, $\hat{%
\rho}_{1},\hat{\rho}_{2}\notin \mathit{\hat{M}}_{\mathfrak{m}_{A}}$ do not
belong to the set $\mathit{\hat{M}}_{\mathfrak{m}_{A}}$ (\ref{definition
minimizers of reduced free energy}) of minimizers of $g_{\mathfrak{m}_{A}}$
over $E_{1}$. However, since $\mathfrak{m}_{A}$ is a model with purely
repulsive long--range interactions, $f_{\Phi }\leq g_{\mathfrak{m}_{A}}$ on $%
E_{1}$ and, by (\ref{mec copa eq3}) and (\ref{mec copa eq4}), we obtain that 
$\omega \in \mathit{\hat{M}}_{\mathfrak{m}_{A}}$. Since $\mathit{\Omega }_{%
\mathfrak{m}_{A}}^{\sharp }=\mathit{\hat{M}}_{\mathfrak{m}_{A}}$, by Theorem %
\ref{theorem purement repulsif sympa} ($+$), we finally get that $\omega \in 
\mathit{\Omega }_{\mathfrak{m}_{A}}^{\sharp }$, whereas $\hat{\rho}_{1},\hat{%
\rho}_{2}\notin \mathit{\Omega }_{\mathfrak{m}_{A}}^{\sharp }$ in spite of
the decomposition (\ref{mec copa eq3})%
\index{Face}. In other words, for any $A\in \mathcal{U}^{-}$, $\mathit{%
\Omega }_{\mathfrak{m}_{A}}^{\sharp }$ is not a face of $E_{1}$. 
\end{proof}%

As a consequence, the equality $\mathrm{P}_{\mathfrak{m}}^{\sharp }=-\mathrm{%
F}_{\mathfrak{m}}^{\sharp }$ of Theorem \ref{theorem saddle point} ($\sharp $%
) does not necessarily imply that the min--max local theory $\mathfrak{T}_{%
\mathfrak{m}}^{\sharp }$ (Definition \ref{definition effective theories bogo}%
) is an effective theory, see Definition \ref{definition effective theories}%
. In fact, if $\mathit{\Omega }_{\mathfrak{m}}^{\sharp }$ is not a face then
there is no effective local theory and Lemma \ref{lemma explosion l du mec
copacabana2} implies Theorem \ref{lemma explosion l homme capacabana1}.

\chapter{Appendix\label{Section appendix}}

\setcounter{equation}{0}%
For the reader's convenience we give here a short review on the following
subjects:

\begin{itemize}
\item Gibbs equilibrium states (Section \ref{Section Gibbs equilibrium
states}), see, e.g., \cite{BrattelliRobinson};

\item The approximating Hamiltonian method (Section \ref{Section approx
method}), see, e.g., \cite%
{Bogjunior,approx-hamil-method0,approx-hamil-method,approx-hamil-method2};

\item $\mathcal{L}^{p}$--spaces of maps with values in a Banach space
(Section \ref{Section Preliminaries});

\item Compact convex sets and Choquet simplices (Section \ref{Section
Compact convex sets and Choquet simplices}), see, e.g., \cite{Alfsen,Phe};

\item $\Gamma $--regularization of real functionals (Section \ref{Section
gamma regularization}), see, e.g., \cite{Alfsen,Schirotzek,BruPedraconvex};

\item Legendre--Fenchel transform and tangent functionals (Section \ref%
{Section Legendre-Fenchel transform}), see, e.g., \cite{Simon,Zeidler3};

\item Two--person zero--sum games (Section \ref{Section two--person
zero--sum games}), see, e.g., \cite{Aubinbis,Aubin}.
\end{itemize}

\noindent These subjects are rather standard and can be found in many
textbooks. Therefore, we keep the exposition here as short as possible and
only concentrate on results used in this monograph. It is important to note,
however, that we also give two new and useful theorems -- Theorems \ref%
{theorem trivial sympa 1} and \ref{theorem trivial sympa 1 copy(1)} -- which
are general results related to the study of variational problems with
non--convex functionals on compact convex sets. Observe further that Lemma %
\ref{Bauer maximum principle bis} in Section \ref{Section gamma
regularization} does not seem to have been observed before. In fact, Lemma %
\ref{Bauer maximum principle bis} and Theorems \ref{theorem trivial sympa 1}%
--\ref{theorem trivial sympa 1 copy(1)} are given in this appendix -- and
not in the main part of the text -- as they are the subject of a separate
paper \cite{BruPedraconvex} to be published soon.

\section{Gibbs equilibrium states\label{Section Gibbs equilibrium states}%
\index{States!Gibbs|textbf}}

\index{States!Gibbs}In quantum statistical mechanics a physical system of
fermions on a lattice is first characterized by its energy observables $%
U_{\Lambda }$ for particles enclosed in finite boxes $\Lambda \subseteq 
\mathfrak{L}$. Mathematically speaking, $U_{\Lambda }$ are self--adjoint
elements of the local algebras $\mathcal{U}_{\Lambda }$. Given any local
state $\rho _{\Lambda }\in E_{\Lambda }$ on $\mathcal{U}_{\Lambda }$, the
energy observable $U_{\Lambda }$ fixes the so--called \emph{finite--volume}
free--energy density (in the box $\Lambda \subseteq \mathfrak{L}$)%
\index{Free--energy density functional!finite volume}%
\begin{equation*}
f_{\Lambda ,U_{\Lambda }}\left( \rho _{\Lambda }\right) :=|\Lambda
|^{-1}\rho _{\Lambda }(U_{\Lambda })-(\beta |\Lambda |)^{-1}S(\rho _{\Lambda
}),
\end{equation*}%
of the physical system at inverse temperature $\beta >0$. The functional $%
f_{\Lambda ,U_{\Lambda }}$ can be seen either as a map from $E_{\Lambda }$
to $\mathbb{R}$ or from $E$ to $\mathbb{R}$ by taking, for all $\rho \in E$,
the restriction $\rho _{\Lambda }\in E_{\Lambda }$ on $\mathcal{U}_{\Lambda
} $. The first term in $f_{\Lambda ,U_{\Lambda }}$ is obviously the mean
energy per volume of the physical system found in the state $\rho _{\Lambda
} $, whereas $S$ is the von Neumann entropy%
\index{Entropy density functional!von Neumann} defined by (\ref{neuman
entropy}) which measures, in a sense, the amount of randomness carried by
the state. See Section \ref{section neuman entropy} for more details.

The state of a system in thermal equilibrium and at fixed mean energy per
volume maximizes the entropy, by the second law of thermodynamics.
Therefore, it minimizes the free--energy density functional $f_{\Lambda
,U_{\Lambda }}$. Such well--known arguments lead to the study of the
variational problem 
\begin{equation}
\underset{\rho \in E}{\inf }f_{\Lambda ,U_{\Lambda }}\left( \rho \right) =%
\underset{\rho _{\Lambda }\in E_{\Lambda }}{\inf }f_{\Lambda ,U_{\Lambda
}}\left( \rho _{\Lambda }\right) .  \label{finite volume variational problem}
\end{equation}%
As the von Neumann entropy $S$ is weak$^{\ast }$--continuous, the functional 
$f_{\Lambda ,U_{\Lambda }}$ has at least one minimizer on $E_{\Lambda }$
which is the local equilibrium state of the physical system, also called 
\emph{Gibbs equilibrium state}:

\begin{definition}[Gibbs equilibrium state]
\label{Gibbs.statebis}\mbox{ }\newline
A Gibbs equilibrium state is a solution of the variational problem (\ref%
{finite volume variational problem}), i.e., a minimizer of the
finite--volume free--energy density functional $f_{\Lambda ,U_{\Lambda }}$
on $E_{\Lambda }$.
\end{definition}

The set of solutions of the variational problem (\ref{finite volume
variational problem}) is, a priori, not unique. But, for $\beta \in
(0,\infty )$, it is well--known that the maximum of $-f_{\Lambda ,U_{\Lambda
}}$ over $E$ equals the finite--volume pressure%
\index{Pressure!finite volume} 
\begin{equation*}
p_{\Lambda ,U_{\Lambda }}:=%
\frac{1}{\beta |\Lambda _{l}|}\ln \mathrm{Trace}_{\wedge \mathcal{H}%
_{\Lambda }}\left( \mathrm{e}^{-\beta U_{\Lambda }}\right)
\end{equation*}%
(compare with (\ref{BCS pressure})) and is attained for the unique minimizer 
$\rho _{\Lambda ,U_{\Lambda }}\in E_{\Lambda }$ of $f_{\Lambda ,U_{\Lambda
}} $ defined by%
\begin{equation}
\rho _{\Lambda ,U_{\Lambda }}\left( A\right) :=\frac{\mathrm{Trace}_{\wedge 
\mathcal{H}_{\Lambda }}\left( A\,\mathrm{e}^{-\beta U_{\Lambda }}\right) }{%
\mathrm{Trace}_{\wedge \mathcal{H}_{\Lambda }}\left( \mathrm{e}^{-\beta
U_{\Lambda }}\right) },\quad A\in \mathcal{U}_{\Lambda }.
\label{Gibbs.state}
\end{equation}%
This result is a key ingredient in the proof of Theorem \ref{BCS main
theorem 1} (see Chapters \ref{Stoermer} and \ref{section proof of theorem
main}) and is also known in the literature as the \emph{passivity of Gibbs
states}:

\begin{theorem}[Passivity of Gibbs states]
\label{passivity.Gibbs}\mbox{ }\newline
\index{Passivity of Gibbs states|textbf}For $\beta \in (0,\infty )$ and any
self--adjoint $U_{\Lambda }\in \mathcal{U}_{\Lambda }$, 
\begin{equation*}
p_{\Lambda ,U_{\Lambda }}=-\underset{\rho \in E}{\inf }f_{\Lambda
,U_{\Lambda }}\left( \rho \right) =-\underset{\rho _{\Lambda }\in E_{\Lambda
}}{\inf }f_{\Lambda ,U_{\Lambda }}\left( \rho _{\Lambda }\right)
=-f_{\Lambda ,U_{\Lambda }}\left( \rho _{\Lambda ,U_{\Lambda }}\right)
\end{equation*}%
with the Gibbs equilibrium state $\rho _{\Lambda ,U_{\Lambda }}\in
E_{\Lambda }$ being the unique minimizer on $E_{\Lambda }$ of the
finite--volume free--energy density functional $f_{\Lambda ,U_{\Lambda }}$.
\end{theorem}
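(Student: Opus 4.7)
The plan is to identify the local algebra $\mathcal{U}_{\Lambda}$ with a finite-dimensional matrix algebra $B(\bigwedge \mathcal{H}_{\Lambda})$ and work with density matrices. Since any state $\rho_{\Lambda}\in E_{\Lambda}$ is uniquely represented by a density matrix $\mathrm{d}_{\rho_{\Lambda}}$, and since $f_{\Lambda,U_{\Lambda}}(\rho)$ only depends on the restriction $\rho_{\Lambda}$, the two infima over $E$ and over $E_{\Lambda}$ are trivially equal. The real content is to show the infimum equals $-p_{\Lambda,U_{\Lambda}}$ with unique minimizer $\rho_{\Lambda,U_{\Lambda}}$.

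First I would rewrite the free--energy density, after multiplying by the positive factor $\beta|\Lambda|$, in terms of the density matrix:
\begin{equation*}
\beta|\Lambda|\, f_{\Lambda,U_{\Lambda}}(\rho_{\Lambda})=\beta\,\mathrm{Trace}(\mathrm{d}_{\rho_{\Lambda}} U_{\Lambda})+\mathrm{Trace}(\mathrm{d}_{\rho_{\Lambda}}\ln \mathrm{d}_{\rho_{\Lambda}}),
\end{equation*}
and likewise observe that the Gibbs state of \eqref{Gibbs.state} has density matrix $\mathrm{d}_{\rho_{\Lambda,U_{\Lambda}}}=Z^{-1}\mathrm{e}^{-\beta U_{\Lambda}}$ with $Z:=\mathrm{Trace}(\mathrm{e}^{-\beta U_{\Lambda}})$, so that $\ln \mathrm{d}_{\rho_{\Lambda,U_{\Lambda}}}=-\beta U_{\Lambda}-(\ln Z)\mathbf{1}$. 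A direct computation then gives $\beta|\Lambda|\,f_{\Lambda,U_{\Lambda}}(\rho_{\Lambda,U_{\Lambda}})=-\ln Z=-\beta|\Lambda|\,p_{\Lambda,U_{\Lambda}}$.

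The heart of the proof is the inequality $f_{\Lambda,U_{\Lambda}}(\rho_{\Lambda})\geq -p_{\Lambda,U_{\Lambda}}$ for all $\rho_{\Lambda}\in E_{\Lambda}$, with strict inequality unless $\rho_{\Lambda}=\rho_{\Lambda,U_{\Lambda}}$. For this I would invoke Klein's inequality (equivalently, the non-negativity of the quantum relative entropy): for any two density matrices $\sigma,\tau$ on a finite-dimensional Hilbert space, $\mathrm{Trace}(\sigma\ln \sigma)-\mathrm{Trace}(\sigma\ln \tau)\geq 0$, with equality iff $\sigma=\tau$. This is the standard consequence of the scalar inequality $x\ln x-x\ln y\geq x-y$ for $x,y>0$ (with equality iff $x=y$), applied to the joint spectral decomposition via the spectral theorem for the self-adjoint operator $\sigma^{1/2}(\ln\sigma-\ln\tau)\sigma^{1/2}$, or derived from Peierls' inequality. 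Setting $\sigma=\mathrm{d}_{\rho_{\Lambda}}$ and $\tau=\mathrm{d}_{\rho_{\Lambda,U_{\Lambda}}}$ and substituting $\ln\tau=-\beta U_{\Lambda}-(\ln Z)\mathbf{1}$ yields
\begin{equation*}
\mathrm{Trace}(\mathrm{d}_{\rho_{\Lambda}}\ln \mathrm{d}_{\rho_{\Lambda}})\geq -\beta\,\rho_{\Lambda}(U_{\Lambda})-\ln Z,
\end{equation*}
which after dividing by $\beta|\Lambda|$ and rearranging is precisely $f_{\Lambda,U_{\Lambda}}(\rho_{\Lambda})\geq -p_{\Lambda,U_{\Lambda}}$, with equality iff $\mathrm{d}_{\rho_{\Lambda}}=\mathrm{d}_{\rho_{\Lambda,U_{\Lambda}}}$, i.e.\ iff $\rho_{\Lambda}=\rho_{\Lambda,U_{\Lambda}}$.

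The only delicate step is the sharp form of Klein's inequality giving the equality case, since this is what yields the \emph{uniqueness} of the Gibbs minimizer; all other steps are algebraic manipulations of the trace and properties of the exponential. Everything else --- the equality of the two infima and the attainment of the minimum at $\rho_{\Lambda,U_{\Lambda}}$ --- follows by direct computation. I would not expect any subtlety from the CAR-algebraic setup, since the argument is purely a statement about density matrices on the finite-dimensional Hilbert space $\bigwedge \mathcal{H}_{\Lambda}$.
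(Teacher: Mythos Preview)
Your proof is correct and is precisely the standard argument. The paper does not actually give a proof of this theorem; it merely states that the result is a (non--trivial) consequence of Jensen's inequality and refers to \cite[Lemma 6.3]{BruPedra1} and \cite[Proposition 6.2.22]{BrattelliRobinson}. Your argument via Klein's inequality (non--negativity of the relative entropy $\mathrm{Trace}(\sigma\ln\sigma-\sigma\ln\tau)\geq 0$, with equality iff $\sigma=\tau$) is exactly what one finds in those references, and Klein's inequality is itself the operator--trace instance of Jensen's inequality for the strictly convex function $x\mapsto x\ln x$. One small point worth making explicit: the Gibbs density matrix $\tau=Z^{-1}\mathrm{e}^{-\beta U_{\Lambda}}$ is strictly positive, so $\ln\tau$ is globally defined and there is no issue with the support of $\sigma$ in the term $\mathrm{Trace}(\sigma\ln\tau)$; this is why the equality case is clean and uniqueness follows.
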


\noindent The proof of this standard theorem is a (non--trivial) consequence
of\ Jensen's inequality%
\index{Jensen's inequality}, see, e.g., \cite[Lemma 6.3]{BruPedra1} or \cite[%
Proposition 6.2.22]{BrattelliRobinson} (for quantum spin systems).

\section{The approximating Hamiltonian method\label{Section approx method}%
\index{Approximating Hamiltonian method|textbf}}

The approximating Hamiltonian method is presented in \cite%
{Bogjunior,approx-hamil-method0,approx-hamil-method,approx-hamil-method2}.
This rigorous technique for computing the thermodynamic pressure does not
seem to be well--known in the mathematical physics community, unfortunately.
Therefore, we give below a brief account on the approximating Hamiltonian
method and we compare it to our results.

Let 
\begin{equation}
\mathrm{H}_{\Lambda }:=\mathrm{T}_{\Lambda }+%
\frac{1}{|\Lambda |}\underset{k=1}{\overset{N}{\sum }}\gamma _{k}\left(
U_{k,\Lambda }+iU_{k,\Lambda }^{\prime }\right) ^{\ast }\left( U_{k,\Lambda
}+iU_{k,\Lambda }^{\prime }\right)  \label{Hamiltonian AHM}
\end{equation}%
be any self--adjoint operator acting on a Hilbert space $\mathfrak{H}%
_{\Lambda }$ of a box $\Lambda $ with $\gamma _{k}=-1$ for any $k\in
\{1,\cdots ,n\}$ and $\gamma _{k}=1$ for $k\in \{n+1,\cdots ,N\}$ ($n<N$
being fixed). Here, $\mathrm{T}_{\Lambda }=\mathrm{T}_{\Lambda }^{\ast }$
and $\{U_{k,\Lambda },U_{k,\Lambda }^{\prime }\}_{k=1}^{N}$ are operators
acting on $\mathfrak{H}_{\Lambda }$. Then the approximating Hamiltonian
method corresponds to use so--called approximating Hamiltonians to compute
the finite--volume pressure%
\begin{equation*}
p\left[ \mathrm{H}_{\Lambda }\right] :=\dfrac{1}{\beta |\Lambda |}\ln 
\mathrm{Trace}_{\mathfrak{H}_{\Lambda }}(\mathrm{e}^{-\beta \mathrm{H}%
_{\Lambda }})
\end{equation*}%
associated with $\mathrm{H}_{\Lambda }$, for any $\beta \in (0,\infty )$, in
the thermodynamic limit. A minimal requirement on $\mathrm{H}_{\Lambda }$ to
have a thermodynamic behavior is of course to ensure the finiteness of $p%
\left[ \mathrm{H}_{\Lambda }\right] $. The latter is, in fact, fulfilled
because this method is based on operators $\mathrm{T}_{\Lambda }=\mathrm{T}%
_{\Lambda }^{\ast }$ and $\{U_{k,\Lambda },U_{k,\Lambda }^{\prime
}\}_{k=1}^{N}$ satisfying the following conditions:

\begin{itemize}
\item[(A1)] The finite--volume pressure of $\mathrm{T}_{\Lambda }$ exists,
i.e., 
\begin{equation*}
\left\vert \ln \mathrm{Trace}_{\mathfrak{H}_{\Lambda }}(\mathrm{e}^{-\beta 
\mathrm{T}_{\Lambda }})\right\vert \leq \beta |\Lambda |\ C_{0}.
\end{equation*}

\item[(A2)] The operators 
\begin{equation*}
(U_{k,\Lambda }+iU_{k,\Lambda }^{\prime })^{\#}\in \{U_{k,\Lambda
}+iU_{k,\Lambda }^{\prime },(U_{k,\Lambda }+iU_{k,\Lambda }^{\prime })^{\ast
}\}
\end{equation*}%
are bounded in operator norm, for any $k\in \{1,\cdots ,N\}$, by $%
C_{1}|\Lambda |$.

\item[(A3)] The following commutators are also bounded for any $k,q,p\in
\{1,\cdots ,N\}$: 
\begin{equation*}
\begin{array}{l}
\Vert \lbrack U_{k,\Lambda }+iU_{k,\Lambda }^{\prime },(U_{q,\Lambda
}+iU_{q,\Lambda }^{\prime })^{\#}]\Vert \leq |\Lambda |C_{2}. \\ 
\Vert \lbrack (U_{k,\Lambda }+iU_{k,\Lambda }^{\prime })^{\#},[(U_{q,\Lambda
}+iU_{q,\Lambda }^{\prime })^{\#},U_{p,\Lambda }+iU_{p,\Lambda }^{\prime
}]]\Vert \leq |\Lambda |C_{3}. \\ 
\Vert \lbrack (U_{k,\Lambda }+iU_{k,\Lambda }^{\prime })^{\#},[U_{q,\Lambda
}+iU_{q,\Lambda }^{\prime },\mathrm{T}_{\Lambda }]]\Vert \leq |\Lambda
|C_{4}.%
\end{array}%
\end{equation*}
\end{itemize}

\noindent For all $k\in \{1,2,3,4\}$, note that the constants $C_{k}$ are
finite and do not depend on the box $\Lambda $.

Approximating Hamiltonians are then defined from $\mathrm{H}_{\Lambda }$ by%
\begin{eqnarray*}
\mathrm{H}_{\Lambda }\left( \vec{c}_{-},\vec{c}_{+}\right) := &&\mathrm{T}%
_{\Lambda }-\underset{k=1}{\overset{n}{\sum }}\left( \bar{c}_{k,-}\left(
U_{k,\Lambda }+iU_{k,\Lambda }^{\prime }\right) +c_{k,-}\left( U_{k,\Lambda
}+iU_{k,\Lambda }^{\prime }\right) ^{\ast }\right) \\
&&+\underset{k=n+1}{\overset{N}{\sum }}\left( \bar{c}_{k,+}\left(
U_{k,\Lambda }+iU_{k,\Lambda }^{\prime }\right) +c_{k,+}\left( U_{k,\Lambda
}+iU_{k,\Lambda }^{\prime }\right) ^{\ast }\right)
\end{eqnarray*}%
with $\vec{c}_{-}:=(c_{1,-},\cdots ,c_{n,-})\in \mathbb{C}^{n}$, $\vec{c}%
_{+}:=(c_{n+1,+},\cdots ,c_{N,+})\in \mathbb{C}^{N-n}$. Let%
\index{Approximating Hamiltonian method!free--energy density} 
\begin{equation*}
\mathfrak{f}_{\mathrm{H},\Lambda }\left( 
\vec{c}_{-},\vec{c}_{+}\right) :=-\left\vert \vec{c}_{+}\right\vert
^{2}+\left\vert \vec{c}_{-}\right\vert ^{2}-\dfrac{1}{\beta |\Lambda |}\ln 
\mathrm{Trace}_{\mathfrak{H}_{\Lambda }}(\mathrm{e}^{-\beta \mathrm{H}%
_{\Lambda }\left( \vec{c}_{-},\vec{c}_{+}\right) })
\end{equation*}%
be the approximating free--energy density and%
\index{Approximating Hamiltonian method!Gibbs equilibrium state} 
\begin{equation*}
\left\langle -\right\rangle _{%
\vec{c}_{-},\vec{c}_{+}}:=\frac{\mathrm{Trace}_{\mathfrak{H}_{\Lambda
}}\left( -\,\mathrm{e}^{-\beta \mathrm{H}_{\Lambda }\left( \vec{c}_{-},\vec{c%
}_{+}\right) }\right) }{\mathrm{Trace}_{\mathfrak{H}_{\Lambda }}\left( 
\mathrm{e}^{-\beta \mathrm{H}_{\Lambda }\left( \vec{c}_{-},\vec{c}%
_{+}\right) }\right) }
\end{equation*}%
be the (local) Gibbs equilibrium state associated with $\mathrm{H}_{\Lambda
}\left( \vec{c}_{-},\vec{c}_{+}\right) $, see Section \ref{Section Gibbs
equilibrium states}. From (A1)--(A2) it can be proven that, for any $\vec{c}%
_{-}\in \mathbb{C}^{n}$, there is a unique solution $\mathrm{r}_{+}(\vec{c}%
_{-}):=(d_{1,+},\cdots ,d_{N,+})\in \mathbb{C}^{N-n}$ of the
(finite--volume) gap equations%
\index{Approximating Hamiltonian method!gap equations} 
\begin{equation}
|\Lambda |^{-1}\left\langle U_{k,\Lambda }+iU_{k,\Lambda }^{\prime
}\right\rangle _{%
\vec{c}_{-},\mathrm{r}_{+}(\vec{c}_{-})}=d_{k,+}  \label{gap equations AHM}
\end{equation}%
for all $k\in \{n+1,\cdots ,N\}$. Then let us consider two additional
conditions:

\begin{itemize}
\item[(A4)] For any $k\in \{n,\cdots ,N\}$ with fixed $n<N$, the operators $%
U_{k,\Lambda }^{\#}$ satisfy the ergodicity condition%
\index{Approximating Hamiltonian method!ergodicity condition} 
\begin{multline*}
\underset{\left\vert \Lambda \right\vert \rightarrow \infty }{\lim }\left\{
|\Lambda |^{-2}\left( \left\langle \left( U_{k,\Lambda }+iU_{k,\Lambda
}^{\prime }\right) ^{\ast }(U_{k,\Lambda }+iU_{k,\Lambda }^{\prime
})\right\rangle _{%
\vec{c}_{-},\mathrm{r}_{+}(\vec{c}_{-})}\right. \right. \\
\left. \left. -|\left\langle U_{k,\Lambda }+iU_{k,\Lambda }^{\prime
}\right\rangle _{\vec{c}_{-},\mathrm{r}_{+}(\vec{c}_{-})}|^{2}\right)
\right\} =0
\end{multline*}%
for all $\vec{c}_{-}\in \mathbb{C}^{n}$.

\item[(A5)] The free--energy density $\mathfrak{f}_{\mathrm{H},\Lambda
}\left( \vec{c}_{-},\vec{c}_{+}\right) $ converges in the thermodynamic
limit $\left\vert \Lambda \right\vert \rightarrow \infty $ towards 
\begin{equation*}
\underset{\left\vert \Lambda \right\vert \rightarrow \infty }{\lim }%
\mathfrak{f}_{\mathrm{H},\Lambda }\left( \vec{c}_{-},\vec{c}_{+}\right) =:%
\mathfrak{f}_{\mathrm{H}}\left( \vec{c}_{-},\vec{c}_{+}\right)
\end{equation*}%
for any $\vec{c}_{-}\in \mathbb{C}^{n}$ and $\vec{c}_{+}\in \mathbb{C}^{N-n}$%
.
\end{itemize}

\noindent Bogoliubov Jr. \textit{et al.} have shown \cite%
{approx-hamil-method} the following:

\begin{theorem}[Bogoliubov Jr., Brankov, Zagrebnov, Kurbatov, and Tonchev]
\label{Theorem AHM}Under assumptions (A1)--(A4) we obtain:%
\index{Approximating Hamiltonian method!pressure} \newline
\emph{(i)} For any box $\Lambda $ and at fixed $%
\vec{c}_{-}\in \mathbb{C}^{n}$, the solution of the variational problem%
\begin{equation*}
\underset{c_{+}\in \mathbb{C}^{N-n}}{\sup }\mathfrak{f}_{\mathrm{H},\Lambda
}(\vec{c}_{-},\vec{c}_{+})=\mathfrak{f}_{\mathrm{H},\Lambda }(\vec{c}_{-},%
\mathrm{r}_{+}(\vec{c}_{-}))
\end{equation*}%
is unique and solution of (\ref{gap equations AHM}), whereas there is $\vec{d%
}_{-}\in \mathbb{C}^{n}$ such that%
\begin{equation*}
\underset{\vec{c}_{-}\in \mathbb{C}^{n}}{\inf }\left\{ \underset{\vec{c}%
_{+}\in \mathbb{C}^{N-n}}{\sup }\mathfrak{f}_{\mathrm{H},\Lambda }(\vec{c}%
_{-},\vec{c}_{+})\right\} =\mathfrak{f}_{\mathrm{H},\Lambda }(\vec{d}_{-},%
\mathrm{r}_{+}(\vec{d}_{-})).
\end{equation*}%
\emph{(ii)} In the thermodynamic limit 
\begin{equation*}
\underset{\left\vert \Lambda \right\vert \rightarrow \infty }{\lim }\left\{
p[\mathrm{H}_{\Lambda }]+\mathfrak{f}_{\mathrm{H},\Lambda }(\vec{d}_{-},%
\mathrm{r}_{+}(\vec{d}_{-}))\right\} =0
\end{equation*}%
and if (A5) also holds then 
\begin{equation*}
\underset{\left\vert \Lambda \right\vert \rightarrow \infty }{\lim }p[%
\mathrm{H}_{\Lambda }]=-\underset{\vec{c}_{-}\in \mathbb{C}^{n}}{\inf }%
\left\{ \underset{\vec{c}_{+}\in \mathbb{C}^{N-n}}{\sup }\mathfrak{f}_{%
\mathrm{H}}(\vec{c}_{-},\vec{c}_{+})\right\} .
\end{equation*}
\end{theorem}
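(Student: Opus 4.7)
For part (i), my plan is to exploit the natural convexity structure in the definition of $\mathfrak{f}_{\mathrm{H},\Lambda}$. At fixed $\vec{c}_-$, the map $\vec{c}_+ \mapsto \mathfrak{f}_{\mathrm{H},\Lambda}(\vec{c}_-, \vec{c}_+)$ is the sum of the strictly concave quadratic $-|\vec{c}_+|^2$ and the concave map $-p[\mathrm{H}_\Lambda(\vec{c}_-, \vec{c}_+)]$, the concavity of the latter being a consequence of the convexity of the finite--volume pressure in its Hamiltonian combined with the fact that $\mathrm{H}_\Lambda(\vec{c}_-, \vec{c}_+)$ is affine in $\vec{c}_+$. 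Assumption (A2) forces $|p[\mathrm{H}_\Lambda(\vec{c}_-, \vec{c}_+)]|$ to grow at most linearly in $|\vec{c}_+|$, so the quadratic dominates at infinity and delivers both coercivity and a unique maximizer $\mathrm{r}_+(\vec{c}_-)$. The gap equations \eqref{gap equations AHM} then emerge by setting the complex gradient to zero and using the standard trace--derivative identity $\partial p[\mathrm{H}]/\partial \bar{c}_{k,+} = -|\Lambda|^{-1}\langle A_k\rangle$, with $A_k := U_{k,\Lambda} + iU_{k,\Lambda}'$. The outer map $\vec{c}_- \mapsto \sup_{\vec{c}_+}\mathfrak{f}_{\mathrm{H},\Lambda}(\vec{c}_-, \vec{c}_+)$ is lower semi--continuous as a pointwise supremum of jointly continuous functions and coercive by the same quadratic--over--linear argument applied to $|\vec{c}_-|^2$, so attains its infimum at some $\vec{d}_-$.

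The core of part (ii) rests on the exact operator identity obtained by completing the square in each quartic block,
\begin{equation*}
\mathrm{H}_\Lambda - \mathrm{H}_\Lambda(\vec{c}_-, \vec{c}_+) = |\Lambda|\bigl(|\vec{c}_-|^2 - |\vec{c}_+|^2\bigr) + \frac{1}{|\Lambda|}\sum_{k>n} Q_k(c_{k,+}) - \frac{1}{|\Lambda|}\sum_{k\leq n} Q_k(c_{k,-}),
\end{equation*}
with $Q_k(c) := (A_k - |\Lambda|c)^*(A_k - |\Lambda|c) \geq 0$. Applying Bogoliubov's convexity inequality to the pair $(\mathrm{H}_\Lambda, \mathrm{H}_\Lambda(\vec{d}_-, \mathrm{r}_+(\vec{d}_-)))$ produces two--sided bounds on the difference $p[\mathrm{H}_\Lambda] - p[\mathrm{H}_\Lambda(\vec{d}_-, \mathrm{r}_+(\vec{d}_-))]$ in terms of thermal expectations of the right--hand side of the identity. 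At the saddle point, the $+$ gap equations $\mathrm{r}_+(\vec{d}_-)_k = |\Lambda|^{-1}\langle A_k\rangle$ for $k>n$ collapse $|\Lambda|^{-2}\langle Q_k(c_{k,+})\rangle$ to the variance $|\Lambda|^{-2}(\langle A_k^*A_k\rangle - |\langle A_k\rangle|^2)$, which vanishes by the ergodicity condition (A4). The contribution of the quadratic term $|\Lambda|(|\vec{d}_-|^2 - |\mathrm{r}_+(\vec{d}_-)|^2)$ cancels against the algebraic form of $\mathfrak{f}_{\mathrm{H},\Lambda}(\vec{d}_-, \mathrm{r}_+(\vec{d}_-))$, so one is reduced to controlling the attractive fluctuation. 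Once this is done, the first equality of (ii) follows; the second equality then comes from (A5) combined with a uniformity argument for the inf--sup, for which the coercivity and strict concavity already established in (i) will provide the needed compactness.

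The hard part is the attractive block: the sum $-|\Lambda|^{-1}\sum_{k\leq n}Q_k(c_{k,-})$ enters the operator identity with the wrong sign, so Bogoliubov monotonicity alone leaves a potentially $O(|\Lambda|)$ error. My approach is to apply the envelope theorem to the outer infimum, which is permitted because the inner maximizer $\mathrm{r}_+$ is unique and depends smoothly on $\vec{c}_-$; this produces the missing critical point equation $d_{k,-} = |\Lambda|^{-1}\langle A_k\rangle_{\vec{d}_-, \mathrm{r}_+(\vec{d}_-)}$ for $k\leq n$ and reduces $|\Lambda|^{-2}\langle Q_k(d_{k,-})\rangle$ again to a variance. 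The remaining task is to show that this attractive variance is also $o(1)$, without a direct ergodicity hypothesis on the interacting state. The standard approximating Hamiltonian method achieves this by iterating Bogoliubov's inequality around the saddle and controlling the Duhamel corrections through the commutator bounds in (A3). Assumptions (A3) and (A4) are genuinely used together at this point, and once these fluctuation estimates are in place, passing to $|\Lambda|\to\infty$ and invoking (A5) finishes the proof.
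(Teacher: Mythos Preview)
The paper does not prove this theorem; it is quoted from the literature \cite{approx-hamil-method0,approx-hamil-method,approx-hamil-method2}, and the paper's only remark on the proof is that it ``uses as a key ingredient the Bogoliubov (convexity) inequality \ldots\ Another important technique used by the authors \ldots\ are the Ginibre inequalities.'' So there is no detailed in-paper argument to compare against, only this two-ingredient summary.

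Your sketch is consistent with that summary on the first ingredient: the completing-the-square identity together with the two-sided Bogoliubov inequality is exactly the standard opening move of the approximating Hamiltonian method, and your treatment of part~(i) (strict concavity in $\vec c_+$, coercivity from (A2), first-order conditions yielding the gap equations, lower semi-continuity and coercivity for the outer infimum) is correct and complete.

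Where your proposal remains a sketch is precisely where the paper points to the second ingredient. For the repulsive block you correctly reduce $|\Lambda|^{-2}\langle Q_k\rangle$ to a variance in the \emph{approximating} state and kill it with (A4). For the attractive block, however, the Bogoliubov inequality gives you a bound involving the expectation of $Q_k$ in the \emph{interacting} state $\langle\,\cdot\,\rangle_{\mathrm H_\Lambda}$, not the approximating one, and the envelope/critical-point equation for $\vec d_-$ only controls the mean in the approximating state. Bridging this gap---showing that the interacting variance is also $o(1)$---is where the original proofs invoke the Ginibre-type inequalities (second-order perturbation/Duhamel bounds) together with the commutator hypotheses (A3). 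You name this step correctly (``iterating Bogoliubov's inequality around the saddle and controlling the Duhamel corrections through the commutator bounds in (A3)'') but do not carry it out; that is the genuine technical core, and your proposal should be read as an accurate outline rather than a proof.
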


\noindent The proof of this theorem uses as a key ingredient the Bogoliubov
(convexity) inequality \cite[Corollary D.4]{BruZagrebnov8} which can be
deduced from Theorem \ref{passivity.Gibbs}. Another important technique used
by the authors \cite%
{approx-hamil-method0,approx-hamil-method,approx-hamil-method2} are the
Ginibre inequalities \cite[Eq. (2.10)]{Ginibre}. Their proofs are thus
essentially different from ours.

To conclude we analyze Conditions (A1)--(A3) and (A5) for discrete Fermi
systems $\mathfrak{m}\in \mathcal{M}_{1}^{\mathrm{d}}$ (see Section \ref%
{definition models}).

\begin{lemma}[Conditions (A1)--(A3) and (A5) for $\mathfrak{m}\in \mathcal{M}%
_{1}^{\mathrm{d}}$]
\label{Theorem AHM comparison}\mbox{ }\newline
For any discrete Fermi system $\mathfrak{m}:=\{\Phi \}\cup \{\Phi _{k},\Phi
_{k}^{\prime }\}_{k=1}^{N}\in \mathcal{M}_{1}^{\mathrm{d}}$, the
self--adjoint operators $\mathrm{T}_{\Lambda _{l}}:=U_{\Lambda _{l}}^{\Phi }$%
, $U_{k,\Lambda _{l}}:=U_{\Lambda _{l}}^{\Phi _{k}}$, and $U_{k,\Lambda
_{l}}:=U_{\Lambda _{l}}^{\Phi _{k}}$ satisfy Conditions (A1)--(A2) and (A5).
(A3) holds whenever $\mathfrak{m}\in \mathcal{M}_{1}^{\mathrm{df}}$ is also
finite range.
\end{lemma}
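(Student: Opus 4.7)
The plan is to verify the four conditions one by one, reducing each to standard bounds coming from the norm $\Vert \,\cdot \,\Vert _{\mathcal{W}_{1}}$ on $\mathcal{W}_{1}$ (Definition \ref{definition banach space interaction}), the evenness of interactions, and the results of Section \ref{existence of thermodynamics} on the thermodynamic limit of the pressure. The starting point for (A1)--(A2) is the elementary estimate
\begin{equation*}
\Vert U_{\Lambda _{l}}^{\Psi }\Vert \;\leq \;\sum_{\Lambda ^{\prime }\subseteq \Lambda _{l}}\Vert \Psi _{\Lambda ^{\prime }}\Vert \;\leq \;|\Lambda _{l}|\,\Vert \Psi \Vert _{\mathcal{W}_{1}}
\end{equation*}
valid for every $\Psi \in \mathcal{W}_{1}$, which follows from translation invariance after reorganizing the sum as in (\ref{eq:enpersite -1}). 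For (A1), combine this with the trivial bound $|\ln \mathrm{Trace}_{\wedge \mathcal{H}_{\Lambda _{l}}}(\mathrm{e}^{-\beta A})|\leq \beta \Vert A\Vert +\ln \dim (\wedge \mathcal{H}_{\Lambda _{l}})=\beta \Vert A\Vert +|\mathrm{S}||\Lambda _{l}|\ln 2$; this yields $C_{0}=\Vert \Phi \Vert _{\mathcal{W}_{1}}+\beta ^{-1}|\mathrm{S}|\ln 2$. For (A2), since $\mathfrak{m}\in \mathcal{M}_{1}^{\mathrm{d}}$ has only finitely many long--range interactions, we may set $C_{1}:=\max_{k=1,\ldots ,N}(\Vert \Phi _{k}\Vert _{\mathcal{W}_{1}}+\Vert \Phi _{k}^{\prime }\Vert _{\mathcal{W}_{1}})$ and apply the same bound twice.

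For (A5) I would identify the approximating Hamiltonian $\mathrm{H}_{\Lambda _{l}}(\vec{c}_{-},\vec{c}_{+})$ with the internal energy $U_{\Lambda _{l}}^{\Phi (c_{a})}$ of the purely local approximating interaction of Definition \ref{definition BCS-type model approximated}, where $c_{a}\in L^{2}(\mathcal{A},\mathbb{C})$ is the step function taking the values $c_{k,\pm }$ on the discrete pieces of $\mathcal{A}$ corresponding to $\Phi _{k}$. Then $\mathfrak{f}_{\mathrm{H},\Lambda _{l}}(\vec{c}_{-},\vec{c}_{+})=-|\vec{c}_{+}|^{2}+|\vec{c}_{-}|^{2}-p_{l}(c_{a,-}+c_{a,+})$ and Proposition \ref{corrolaire sympa} (i) (or equivalently Theorem \ref{BCS main theorem 1} applied to $(\Phi (c_{a}),0,0)\in \mathcal{M}_{1}$) guarantees the existence of the thermodynamic limit, giving $\mathfrak{f}_{\mathrm{H}}(\vec{c}_{-},\vec{c}_{+})=-|\vec{c}_{+}|^{2}+|\vec{c}_{-}|^{2}-P_{\mathfrak{m}}(c_{a,-}+c_{a,+})$.

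The real work lies in (A3), for which we now additionally assume $\mathfrak{m}\in \mathcal{M}_{1}^{\mathrm{df}}$ so that each $\Phi ,\Phi _{k},\Phi _{k}^{\prime }$ has finite range $R<\infty $ and $\Phi _{k,\Lambda }=\Phi _{k,\Lambda }^{\prime }=0$ whenever $|\Lambda |$ exceeds some fixed $M$. The key observation is that by Definition \ref{definition standard interaction} every $\Phi _{k,\Lambda },\Phi _{k,\Lambda }^{\prime }\in \mathcal{U}^{+}$ is \emph{even}, so by the standard graded--commutation argument for the CAR algebra, $[\Phi _{k,\Lambda },\Phi _{q,\Lambda ^{\prime }}^{\#}]=0$ whenever $\Lambda \cap \Lambda ^{\prime }=\emptyset $. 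Hence in
\begin{equation*}
[U_{\Lambda _{l}}^{\Phi _{k}}+iU_{\Lambda _{l}}^{\Phi _{k}^{\prime }},(U_{\Lambda _{l}}^{\Phi _{q}}+iU_{\Lambda _{l}}^{\Phi _{q}^{\prime }})^{\#}]=\sum_{\Lambda _{1},\Lambda _{2}\subseteq \Lambda _{l},\;\Lambda _{1}\cap \Lambda _{2}\neq \emptyset }[\,\cdot \,,\,\cdot \,]
\end{equation*}
only pairs $(\Lambda _{1},\Lambda _{2})$ with $\Lambda _{1}\cap \Lambda _{2}\neq \emptyset $ contribute, and by translation invariance and finite range, the number of such pairs is bounded by $c|\Lambda _{l}|$ for some constant $c=c(R,M)$, while each commutator is bounded in norm by $2\Vert \Phi _{k}\Vert _{\mathcal{W}_{1}}\Vert \Phi _{q}\Vert _{\mathcal{W}_{1}}$ (up to similar factors for the primed versions). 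This yields the first bound in (A3); the double--commutator bounds are obtained by iterating the same reasoning, since nested commutators of three finite range even operators vanish unless all three local supports pairwise intersect, which reduces the combinatorial volume factor from $|\Lambda _{l}|^{3}$ to $c^{\prime }|\Lambda _{l}|$. The main subtlety, and the step I expect to require the most care, is bookkeeping the anti--commutation signs in the graded commutation identity to ensure that evenness of $\Phi _{k,\Lambda }$ truly forces vanishing on disjoint supports; once that is secured, the combinatorial counting is a routine exercise using $\Vert \Phi _{k}\Vert _{\mathcal{W}_{1}}<\infty $ and Lebesgue's dominated convergence theorem if one prefers to state the bound uniformly in $l$.
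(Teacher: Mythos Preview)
Your proposal is correct and follows exactly the route the paper intends. The paper's own proof is essentially a placeholder---it says ``Condition (A1)--(A2) and (A5) are clearly satisfied. Condition (A3) requires direct computations. We omit the details''---so you have supplied precisely the computations the authors skip: the volume bound $\Vert U_{\Lambda _{l}}^{\Psi }\Vert \leq |\Lambda _{l}|\,\Vert \Psi \Vert _{\mathcal{W}_{1}}$ for (A1)--(A2), the identification with the approximating interaction $\Phi (c_{a})$ and Theorem \ref{BCS main theorem 1} for (A5), and the even--locality argument for (A3).

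Two small clean-ups. First, in the double--commutator step you write that all three supports must ``pairwise intersect''; in fact one only needs $\Lambda _{1}\cap \Lambda _{2}\neq \emptyset $ and $\Lambda _{3}\cap (\Lambda _{1}\cup \Lambda _{2})\neq \emptyset $, but since all supports have diameter $\leq R$ the counting still collapses to $O(|\Lambda _{l}|)$ exactly as you claim. Second, the mention of Lebesgue's dominated convergence theorem at the end is superfluous for (A3)---the bounds are purely combinatorial and already uniform in $l$ once $R$ and $M$ are fixed; you can drop it.
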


\begin{proof}
Condition (A1)--(A2) and (A5) are clearly satisfied. Condition (A3) requires
direct computations.\ We omit the details. 
\end{proof}%

\begin{remark}[Condition (A4) as a non--necessary assumption]
\label{important remark}\mbox{ }\newline
\index{Long--range models!discrete}Condition (A4) is used in\ Theorem \ref%
{Theorem AHM} to handle the positive part of long--range interactions. It is
generally not satisfied for discrete Fermi systems $\mathfrak{m}\in \mathcal{%
M}_{1}^{\mathrm{d}}$. This condition is shown here to be absolutely not
necessary to handle the thermodynamic limit of the pressure of Fermi systems 
$\mathfrak{m}\in \mathcal{M}_{1}$ (see Theorem \ref{theorem saddle point}).
\end{remark}

\begin{remark}[Condition (A3) as a non--necessary assumption]
\label{important remarkbis}\mbox{ }\newline
Let $\Phi ,\Phi ^{\prime }\in \mathcal{W}_{1}$ such that $\Vert \Phi
_{\Lambda _{l}}\Vert ,\Vert \Phi _{\Lambda _{l}}\Vert =\mathcal{O}%
(l^{-(d+\epsilon )})$ for some small $\epsilon >0$. Such interactions
clearly exist. If this is the only information we have about the
interactions then the only bound we can give for the commutators $%
[U_{\Lambda }^{\Phi },U_{\Lambda }^{\Phi ^{\prime }}]$ is 
\begin{equation*}
\Vert \lbrack U_{\Lambda }^{\Phi },U_{\Lambda }^{\Phi ^{\prime }}]\Vert \leq
\sum\limits_{\Lambda _{1},\Lambda _{1}^{\prime }\subseteq \Lambda ,\;\Lambda
_{1}\cap \Lambda _{1}^{\prime }\not=\emptyset }2\Vert \Phi _{\Lambda
_{1}}\Vert \,\Vert \Phi _{\Lambda _{1}^{\prime }}^{\prime }\Vert .
\end{equation*}%
Depending on $\epsilon >0$, the r.h.s. of the last inequality grows at large 
$|\Lambda |$ much faster than the volume $|\Lambda |$. Hence, the condition
(A3) is very unlikely to hold for all $\Phi ,\Phi ^{\prime }\in \mathcal{W}%
_{1}$.
\end{remark}

\section{$\mathcal{L}^{p}$--spaces of maps with values in a Banach space 
\label{Section Preliminaries}}

Let $(\mathcal{A},\mathfrak{A},\mathfrak{a})$ be a \emph{separable} measure
space%
\index{Separable measure space|textbf} with $\mathfrak{A}$ and $\mathfrak{a}:%
\mathfrak{A}\rightarrow \mathbb{R}_{0}^{+}$ being respectively some $\sigma $%
--algebra on $\mathcal{A}$\ and some measure on $\mathfrak{A}$. Recall that $%
(\mathcal{A},\mathfrak{A},\mathfrak{a})$ being separable means that the
space $L^{2}(\mathcal{A},\mathbb{C}):=L^{2}(\mathcal{A},\mathfrak{a},\mathbb{%
C})$ of square integrable complex valued functions on $\mathcal{A}$ is a
separable Hilbert space. This property implies, in particular, that $(%
\mathcal{A},\mathfrak{A},\mathfrak{a})$ is a $\sigma $--finite measure
space, see \cite[p. 54]{Conway}.

Let $\mathcal{X}$ be any Banach space with norm $\left\Vert \cdot
\right\Vert _{\mathcal{X}}$. We denote by $\mathcal{S}\left( \mathcal{A},%
\mathcal{X}\right) $ the set of measurable step functions with support of
finite measure. For any measurable map $\mathfrak{s}_{a}:\mathcal{A}%
\rightarrow \mathcal{X}$ and any $p\geq 1$, we define the semi--norm%
\begin{equation*}
\left\Vert \mathfrak{s}\right\Vert _{p}:=\int_{\mathcal{A}}\left\Vert 
\mathfrak{s}_{a}\right\Vert _{\mathcal{X}}^{p}\mathrm{d}\mathfrak{a}\left(
a\right) \in \lbrack 0,\infty ].
\end{equation*}%
Let $\mathfrak{s}_{a}^{(n)}$ be any $\mathcal{L}^{p}$--Cauchy sequence of
measurable maps, i.e., $\Vert \mathfrak{s}_{a}^{(n)}\Vert _{p}<\infty $ and 
\begin{equation*}
\underset{N\rightarrow \infty }{\lim }%
\text{ }\underset{n,m>N}{\sup }\Vert \mathfrak{s}_{a}^{(n)}-\mathfrak{s}%
_{a}^{(m)}\Vert _{p}=0.
\end{equation*}%
Then there is a measurable function $\mathfrak{s}_{\infty }$ from $\mathcal{A%
}$ to $\mathcal{X}$ with $\Vert \mathfrak{s}_{a}^{(\infty )}\Vert
_{p}<\infty $ such that 
\begin{equation*}
\underset{n\rightarrow \infty }{\lim }\Vert \mathfrak{s}_{a}^{(n)}-\mathfrak{%
s}_{a}^{(\infty )}\Vert _{p}=0
\end{equation*}%
(Completeness of Banach--valued $\mathcal{L}^{p}$--spaces). Now, define the
sub--space 
\begin{equation*}
\mathcal{L}^{p}\left( \mathcal{A},\mathcal{X}\right) :=\left\{ \mathfrak{s}%
_{a}^{(\infty )}:\text{\textrm{there}}\;\text{\textrm{is}}\;\{\mathfrak{s}%
_{a}^{(n)}\}_{n=1}^{\infty }\;\text{\textrm{in}}\;\mathcal{S}\left( \mathcal{%
A},\mathcal{X}\right) \text{\textrm{\ with}}\;\underset{n\rightarrow \infty }%
{\lim }\Vert \mathfrak{s}_{a}^{(n)}-\mathfrak{s}_{a}^{(\infty )}\Vert
_{p}=0\right\}
\end{equation*}%
of the space of measurable functions $\mathcal{A}\rightarrow \mathcal{X}$.
Observe that the semi--norm $\left\Vert \cdot \right\Vert _{p}$ is finite on 
$\mathcal{L}^{p}\left( \mathcal{A},\mathcal{X}\right) $. In other words, $%
\mathcal{L}^{p}\left( \mathcal{A},\mathcal{X}\right) $ is the closure of $%
\mathcal{S}\left( \mathcal{A},\mathcal{X}\right) $ w.r.t. the semi-norm $%
\left\Vert \cdot \right\Vert _{p}$.

Define the linear map from $\mathcal{S}\left( \mathcal{A},\mathcal{X}\right) 
$ to $\mathcal{X}$ by 
\begin{equation}
\mathfrak{s}_{a}\mapsto \int_{\mathcal{A}}\mathfrak{s}_{a}\mathrm{d}%
\mathfrak{a}\left( a\right) :=\underset{x\in \mathfrak{s}_{a}\left( \mathcal{%
A}\right) }{\sum }x\mathfrak{a}\left( \mathfrak{s}_{a}^{-1}\left( x\right)
\right) .  \label{map1}
\end{equation}%
\ Obviously, for\ all\ $\mathfrak{s}_{a}\in \mathcal{S}\left( \mathcal{A},%
\mathcal{X}\right) $, 
\begin{equation}
\left\Vert \int_{\mathcal{A}}\mathfrak{s}_{a}\mathrm{d}\mathfrak{a}\left(
a\right) \right\Vert _{\mathcal{X}}\leq \left\Vert \mathfrak{s}%
_{a}\right\Vert _{1}.  \label{Hahn banach toto idiot}
\end{equation}%
Now, for each function $c_{a}\in L^{2}\left( \mathcal{A},\mathbb{C}\right) $%
, let us consider the linear map $\mathfrak{s}_{a}\mapsto \left\langle 
\mathfrak{s}_{a},c_{a}\right\rangle $ from $\mathcal{S}\left( \mathcal{A},%
\mathcal{X}\right) $ to $\mathcal{X}$ defined by%
\begin{equation}
\left\langle \mathfrak{s}_{a},c_{a}\right\rangle :=\underset{x\in \mathfrak{s%
}_{a}\left( \mathcal{A}\right) }{\sum }x\int_{\mathfrak{s}_{a}^{-1}\left(
x\right) }\bar{c}_{a}\mathrm{d}\mathfrak{a}\left( a\right) .  \label{map 2}
\end{equation}%
From the (finite dimensional) Cauchy--Schwarz inequality note that, for\ all 
$\mathfrak{s}_{a}\in \mathcal{S}\left( \mathcal{A},\mathcal{X}\right) $, 
\begin{equation}
\left\Vert \left\langle \mathfrak{s}_{a},c_{a}\right\rangle \right\Vert _{%
\mathcal{X}}\leq \left\Vert \mathfrak{s}_{a}\right\Vert _{2}\left\Vert
c_{a}\right\Vert _{2}.  \label{Hahn banach toto}
\end{equation}%
By using Hahn--Banach theorem and the density of $\mathcal{S}\left( \mathcal{%
A},\mathcal{X}\right) $ in $\mathcal{L}^{p}\left( \mathcal{A},\mathcal{X}%
\right) $, we obtain the existence and uniqueness of linear extensions of
the maps (\ref{map1}) and (\ref{map 2}) respectively to the spaces $\mathcal{%
L}^{1}\left( \mathcal{A},\mathcal{X}\right) $ and $\mathcal{L}^{2}\left( 
\mathcal{A},\mathcal{X}\right) $. In particular, the linear extensions of (%
\ref{map1}) and (\ref{map 2}) satisfy (\ref{Hahn banach toto idiot}) and (%
\ref{Hahn banach toto}), respectively.

\section{Compact convex sets and Choquet simplices\label{Section Compact
convex sets and Choquet simplices}}

The theory of compact convex subsets of a locally convex (topological
vector) space $\mathcal{X}$ is standard. For more details, see, e.g., \cite%
{Alfsen,Phe}. Note, however, that the definitions of topological vector
spaces found in the literature differ slightly from each other. Those
differences mostly concern the \emph{Hausdorff} property. Here, we use
Rudin's definition \cite[Section 1.6]{Rudin}:

\begin{definition}[Topological vector spaces]
\label{space}\mbox{ }\newline
\index{Topological vector space}A topological vector space $\mathcal{X}$ is
a vector space equipped with a topology $\tau $ for which the vector space
operations of $\mathcal{X}$ are continuous and every point of $\mathcal{X}$
defines a closed set.
\end{definition}

\noindent The fact that every point of $\mathcal{X}$ is a closed set is
usually not part of the definition of a topological vector space in many
textbooks. It is used here because it is satisfied in most applications --
including those of this monograph -- and, in this case, the space $\mathcal{X%
}$ is automatically Hausdorff by \cite[Theorem 1.12]{Rudin}. Examples of
topological vector spaces used in this monograph are the dual spaces (cf. 
\cite[Theorem 3.10]{Rudin}):

\begin{theorem}[Dual space of a topological vector space]
\label{thm locally convex space}\mbox{ }\newline
\index{Topological vector space!dual}The dual space $\mathcal{X}^{\ast }$ of
a (topological vector) space $\mathcal{X}$ is a locally convex space in the $%
\sigma (\mathcal{X}^{\ast },\mathcal{X})$--topology -- known as the weak$%
^{\ast }$--topology -- and its dual is $\mathcal{X}$.
\end{theorem}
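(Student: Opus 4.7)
The plan is to follow the standard construction of the weak$^{\ast }$--topology as an initial topology and verify in turn the three required properties: local convexity, compatibility with the vector space structure (in the sense of Definition \ref{space}), and the identification of the weak$^{\ast }$--continuous dual with $\mathcal{X}$ itself.

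First I would define the weak$^{\ast }$--topology on $\mathcal{X}^{\ast }$ as the coarsest topology making all evaluation maps $\hat{x}:\phi \mapsto \phi (x)$ continuous for every $x\in \mathcal{X}$. A neighborhood base at $0\in \mathcal{X}^{\ast }$ is then provided by the family of finite intersections
\begin{equation*}
V(x_{1},\ldots ,x_{n};\varepsilon ):=\{\phi \in \mathcal{X}^{\ast
}\;:\;|\phi (x_{i})|<\varepsilon ,\;i=1,\ldots ,n\},
\end{equation*}
which are manifestly convex, balanced, and absorbing. This yields local convexity immediately. Continuity of addition and scalar multiplication follows from the fact that these operations commute with each evaluation $\hat{x}$, so their continuity reduces to the continuity of addition and scalar multiplication in the scalar field. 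The closedness of points (Rudin's requirement in Definition \ref{space}) follows because if $\phi _{1}\neq \phi _{2}$, then $\phi _{1}(x)\neq \phi _{2}(x)$ for some $x\in \mathcal{X}$, and the sets $\hat{x}^{-1}(U)$ for disjoint scalar neighborhoods $U$ of $\phi _{1}(x)$ and $\phi _{2}(x)$ separate them.

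Next I would identify the weak$^{\ast }$--continuous dual of $\mathcal{X}^{\ast }$ with $\mathcal{X}$. The inclusion $\mathcal{X}\hookrightarrow (\mathcal{X}^{\ast })^{\ast }$ via $x\mapsto \hat{x}$ is clear by construction, and it is injective whenever $\mathcal{X}^{\ast }$ separates points of $\mathcal{X}$. The essential point, and the main technical step, is the reverse inclusion: every weak$^{\ast }$--continuous linear functional $f$ on $\mathcal{X}^{\ast }$ arises as some $\hat{x}$. For this I would argue that continuity of $f$ at $0$ yields $x_{1},\ldots ,x_{n}\in \mathcal{X}$ and $\varepsilon >0$ such that $|f(\phi )|<1$ whenever $|\phi (x_{i})|<\varepsilon $ for all $i$; by homogeneity this implies the norm-style estimate $|f(\phi )|\leq C\max_{i}|\hat{x}_{i}(\phi )|$, so in particular $\bigcap_{i}\ker \hat{x}_{i}\subseteq \ker f$. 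A standard finite-dimensional linear-algebra lemma (the one underlying the Hahn--Banach identification of dependent functionals) then forces $f=\sum_{i}c_{i}\hat{x}_{i}=\widehat{\sum_{i}c_{i}x_{i}}$, as required.

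The main obstacle, and the only step with real content, is the linear-algebra lemma in the last paragraph: given linear functionals $f,\hat{x}_{1},\ldots ,\hat{x}_{n}$ on a vector space with $\bigcap \ker \hat{x}_{i}\subseteq \ker f$, one must produce scalars $c_{i}$ with $f=\sum c_{i}\hat{x}_{i}$. I would handle this by considering the map $\Phi :\mathcal{X}^{\ast }\rightarrow \mathbb{C}^{n}$, $\Phi (\phi ):=(\hat{x}_{1}(\phi ),\ldots ,\hat{x}_{n}(\phi ))$, observing that $f$ descends to a well-defined linear functional on $\Phi (\mathcal{X}^{\ast })\subseteq \mathbb{C}^{n}$, extending this by ordinary linear algebra to all of $\mathbb{C}^{n}$, and reading off the coefficients $c_{i}$ from the dual basis. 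Everything else --- local convexity, continuity of operations, the Hausdorff property, and the trivial inclusion $\mathcal{X}\hookrightarrow (\mathcal{X}^{\ast })^{\ast }$ --- is essentially formal once the correct neighborhood base has been written down.
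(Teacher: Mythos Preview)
Your proof is correct and follows the standard argument (the one in Rudin's \textit{Functional Analysis}, to which the paper simply refers without giving its own proof). The paper treats this theorem as a known background result, citing \cite[Theorem 3.10]{Rudin} rather than supplying any argument, so there is nothing to compare against beyond noting that your sketch reproduces the textbook proof.
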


Since any Banach space is a topological vector space in the sense of
Definition \ref{space}, the dual space of a Banach space is a locally convex
space:

\begin{corollary}[Dual space of a Banach space]
\label{thm locally convex spacebis}\mbox{ }\newline
The dual space $\mathcal{X}^{\ast }$ of a Banach space $\mathcal{X}$ is a
locally convex space in the $\sigma (\mathcal{X}^{\ast },\mathcal{X})$%
--topology -- known as the weak$^{\ast }$--topology -- and its dual is $%
\mathcal{X}$.
\end{corollary}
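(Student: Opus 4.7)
The plan is to deduce this corollary directly from Theorem \ref{thm locally convex space} by verifying that every Banach space satisfies Rudin's definition of a topological vector space (Definition \ref{space}). Concretely, I would first recall that a Banach space $\mathcal{X}$, equipped with the norm topology, is a vector space on which addition and scalar multiplication are continuous (this is immediate from the triangle inequality and from $\|\lambda x - \mu y\| \leq |\lambda|\,\|x-y\| + |\lambda - \mu|\,\|y\|$). Then I would check that each singleton $\{x\} \subseteq \mathcal{X}$ is closed, which follows at once from the fact that the norm topology is metric and hence Hausdorff. Therefore $\mathcal{X}$ falls under Definition \ref{space}.

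Having established that $\mathcal{X}$ is a topological vector space in Rudin's sense, the conclusion is then an immediate application of Theorem \ref{thm locally convex space}: the dual $\mathcal{X}^{\ast}$ is locally convex in the $\sigma(\mathcal{X}^{\ast},\mathcal{X})$--topology, and its dual (with respect to this weak$^{\ast}$--topology) coincides with $\mathcal{X}$. There is essentially no obstacle here; the only thing to be careful about is that the convention of Definition \ref{space} is Rudin's, in which the closed--points condition is part of the definition and automatically yields the Hausdorff property via \cite[Theorem 1.12]{Rudin}. Once this convention is unpacked, the corollary reduces to a one--line invocation of the theorem, so no further argument is required.
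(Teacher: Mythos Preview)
Your proposal is correct and matches the paper's approach exactly: the paper simply remarks that any Banach space is a topological vector space in the sense of Definition \ref{space} and then invokes Theorem \ref{thm locally convex space}, which is precisely what you do, with the added (but harmless) detail of spelling out why the norm topology satisfies the closed-points condition.
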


It follows that the dual spaces $\mathcal{U}^{\ast }$ and $\mathcal{W}%
_{1}^{\ast }$ respectively of the Banach spaces $\mathcal{U}$ and $\mathcal{W%
}_{1}$ (cf. Section \ref{Section fermions algebra} and Definition \ref%
{definition banach space interaction}) are both locally convex real spaces
w.r.t. the weak$^{\ast }$--topology. Note that $\mathcal{U}$ and $\mathcal{W}%
_{1}$ are separable. This property yields the metrizability of any weak$%
^{\ast }$--compact subset $K$ of their dual spaces (cf. \cite[Theorem 3.16]%
{Rudin}):

\begin{theorem}[Metrizability of weak$^{\ast }$--compact sets]
\label{Metrizability}\mbox{ }\newline
Let $K\subseteq \mathcal{X}^{\ast }$ be any weak$^{\ast }$--compact subset
of the dual $\mathcal{X}^{\ast }$ of a separable topological vector space $%
\mathcal{X}$. Then $K$ is metrizable in the weak$^{\ast }$--topology.
\end{theorem}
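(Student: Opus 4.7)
The plan is to construct an explicit metric on $K$ whose induced topology agrees with the relative weak$^{\ast }$--topology, exploiting the separability hypothesis on $\mathcal{X}$. First, I would fix a countable dense subset $\{x_{n}\}_{n\in \mathbb{N}}\subseteq \mathcal{X}$ and, for each $n$, introduce the evaluation map $\phi _{n}:K\rightarrow \mathbb{C}$ defined by $\phi _{n}(f):=f(x_{n})$. By the very definition of the weak$^{\ast }$--topology on $\mathcal{X}^{\ast }$, each $\phi _{n}$ is weak$^{\ast }$--continuous on $K$, and since $K$ is weak$^{\ast }$--compact its image $\phi _{n}(K)\subseteq \mathbb{C}$ is bounded. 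This allows one to define the candidate metric
\begin{equation*}
d(f,g):=\sum_{n=1}^{\infty }2^{-n}\min \left\{ 1,\left\vert
f(x_{n})-g(x_{n})\right\vert \right\}
\end{equation*}
on $K$, which is well defined, symmetric, and satisfies the triangle inequality by construction.

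The next step would be to verify that $d$ actually separates points, i.e., that $d(f,g)=0$ implies $f=g$. Here I would use crucially that the elements of $K\subseteq \mathcal{X}^{\ast }$ are \emph{continuous} linear functionals on $\mathcal{X}$: if $f(x_{n})=g(x_{n})$ for every $n$, then by continuity of $f,g$ and density of $\{x_{n}\}_{n\in \mathbb{N}}$ in $\mathcal{X}$, the functionals $f$ and $g$ agree on all of $\mathcal{X}$.

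The heart of the argument, and the main obstacle, is to show that the topology $\tau _{d}$ generated by $d$ on $K$ coincides with the restriction $\tau _{w^{\ast }}|_{K}$ of the weak$^{\ast }$--topology. One inclusion is easy: since each $\phi _{n}$ is weak$^{\ast }$--continuous and the series defining $d$ converges uniformly in $n$ (by the factor $2^{-n}$), the map $d:K\times K\rightarrow \mathbb{R}$ is jointly weak$^{\ast }$--continuous, so $\tau _{d}\subseteq \tau _{w^{\ast }}|_{K}$. For the converse direction I would not try to compare neighborhood bases directly --- this is awkward because basic weak$^{\ast }$--neighborhoods depend on arbitrary finite subsets of $\mathcal{X}$, not only on the countable dense sequence --- but instead invoke the standard soft argument: the identity map $(K,\tau _{w^{\ast }}|_{K})\rightarrow (K,\tau _{d})$ is a continuous bijection from a compact space onto a Hausdorff space (the metric $d$ being Hausdorff by the previous paragraph, and $\tau _{w^{\ast }}$ being Hausdorff by Theorem \ref{thm locally convex space}), hence a homeomorphism.

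The real subtlety to keep track of is the separation of points of $K$ by the countable family $\{\phi _{n}\}_{n\in \mathbb{N}}$, which rests crucially on the continuity of elements of $\mathcal{X}^{\ast }$ together with the density of $\{x_{n}\}_{n\in \mathbb{N}}$ in $\mathcal{X}$; the remainder of the proof is a routine interplay between compactness, continuity, and the Hausdorff property.
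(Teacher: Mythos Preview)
Your proof is correct and is essentially the standard argument; the paper does not give its own proof of this theorem but simply cites \cite[Theorem 3.16]{Rudin}, whose proof proceeds exactly along the lines you describe (countable dense set, the associated metric, and the compact--Hausdorff bijection argument). One minor remark: the boundedness of $\phi_{n}(K)$ that you extract from compactness is not actually needed, since you already truncate with $\min\{1,\cdot\}$ in the definition of $d$.
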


One important observation concerning locally convex spaces $\mathcal{X}$ is
that any compact convex subset $K\subseteq \mathcal{X}$ is the closure of
the convex hull of the (non--empty) set $\mathcal{E}(K)$ of its extreme
points%
\index{Extreme points|textbf}, i.e., of the points which cannot be written
as -- non--trivial -- convex combinations of other elements in $K$. This is
the Krein--Milman theorem (see, e.g., \cite[Theorems 3.4 (b) and 3.23]{Rudin}%
):

\begin{theorem}[Krein--Milman]
\label{theorem Krein--Millman}\mbox{ }\newline
\index{Krein--Milman theorem|textbf}Let $K\subseteq \mathcal{X}$ be any
(non--empty) compact convex subset of a locally convex space $\mathcal{X}$.
Then we have that: \newline
\emph{(i)} The set $\mathcal{E}(K)$ of its extreme points is non--empty.%
\newline
\emph{(ii)} The set $K$ is the closed convex hull of $\mathcal{E}(K)$.
\end{theorem}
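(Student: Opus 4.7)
The plan is to prove the classical Krein--Milman theorem along the standard lines using Zorn's lemma and Hahn--Banach separation, exploiting only the fact that $\mathcal{X}$ is locally convex and Hausdorff (cf. Definition \ref{space}).

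First I would introduce the notion of an \emph{extremal subset} of $K$: a nonempty closed subset $F \subseteq K$ such that, whenever $\lambda x + (1-\lambda) y \in F$ with $x,y \in K$ and $\lambda \in (0,1)$, one has $x, y \in F$. Note that $\{x\}$ is extremal iff $x \in \mathcal{E}(K)$. Let $\mathcal{F}$ denote the family of all extremal subsets of $K$, partially ordered by reverse inclusion. The family $\mathcal{F}$ is nonempty since $K \in \mathcal{F}$, and any totally ordered chain $\{F_\alpha\} \subseteq \mathcal{F}$ has an upper bound $\bigcap_\alpha F_\alpha$, which is nonempty by compactness of $K$ (finite intersection property of closed nested sets) and which is easily checked to be extremal. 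Hence, by Zorn's lemma, $\mathcal{F}$ contains a minimal element $F_0$.

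Next I would show that any such minimal $F_0$ is a singleton, which gives (i). Suppose for contradiction that $F_0$ contains two distinct points $x \neq y$. Since $\mathcal{X}$ is locally convex and Hausdorff, by Hahn--Banach separation there exists a continuous linear functional $\ell \in \mathcal{X}^{\ast}$ with $\ell(x) \neq \ell(y)$. Define
\begin{equation*}
F_0' := \{ z \in F_0 : \ell(z) = \max\nolimits_{w \in F_0} \ell(w) \}.
\end{equation*}
The maximum is attained because $F_0$ is compact and $\ell$ is continuous, so $F_0'$ is a nonempty closed proper subset of $F_0$. A direct check using the fact that $\ell$ is affine shows that $F_0'$ is again extremal in $K$, contradicting minimality of $F_0$. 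Therefore $F_0 = \{x_0\}$ for some $x_0$, and $x_0 \in \mathcal{E}(K)$.

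For part (ii), set $K' := \overline{\mathrm{co}}\,(\mathcal{E}(K))$, which is closed, convex, and contained in $K$. Assume $K' \subsetneq K$ and pick $x_1 \in K \setminus K'$. Since $K'$ is closed convex and $\{x_1\}$ is compact, Hahn--Banach separation yields $\ell \in \mathcal{X}^{\ast}$ and $c \in \mathbb{R}$ with $\ell(x_1) > c \geq \ell(z)$ for all $z \in K'$. Consider
\begin{equation*}
F := \{ z \in K : \ell(z) = \max\nolimits_{w \in K} \ell(w) \},
\end{equation*}
which is a nonempty closed extremal subset of $K$ disjoint from $K'$ (since the maximum of $\ell$ on $K$ is $\geq \ell(x_1) > c$). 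Applying the Zorn argument of the previous step \emph{within} $F$ produces an extreme point $\hat{x}$ of $F$, and since $F$ is extremal in $K$, the point $\hat{x}$ is also an extreme point of $K$. Thus $\hat{x} \in F \cap \mathcal{E}(K) \subseteq F \cap K' = \emptyset$, a contradiction. Hence $K = K'$.

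The main conceptual obstacle is verifying that minimal extremal subsets are singletons and that the maximizing set of a continuous linear functional on an extremal set is again extremal in $K$; both rely essentially on the affinity of elements of $\mathcal{X}^{\ast}$ and on the Hausdorff property (Definition \ref{space}), which guarantees that Hahn--Banach separates distinct points by continuous linear functionals. No heavy computation is involved, only a careful bookkeeping of the extremality conditions.
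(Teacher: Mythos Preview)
Your proof is correct and follows the standard Zorn's lemma / Hahn--Banach separation argument. The paper does not actually prove this theorem; it states it as a classical result and refers the reader to \cite[Theorems 3.4 (b) and 3.23]{Rudin}, whose proof is essentially the one you have written.
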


\begin{remark}
$\mathcal{X}$ being a topological vector space on which its dual space $%
\mathcal{X}^{\ast }$ separates points is the only condition necessary on $%
\mathcal{X}$ in the Krein--Milman theorem. For more details, see, e.g., \cite%
[Theorem 3.23]{Rudin}.
\end{remark}

\noindent In fact, the set $\mathcal{E}(K)$ of extreme points is even a $%
G_{\delta }$ set if the compact convex set $K\subseteq \mathcal{X}$ is
metrizable. Moreover, among all subsets $Z\subseteq K$ generating $K$, $%
\mathcal{E}(K)$ is -- in a sense -- the smallest one (see, e.g., \cite[%
Proposition 1.5]{Phe}):

\begin{theorem}[Properties of the set $\mathcal{E}(K)$]
\label{lemma Milman}\mbox{ }\newline
\index{Extreme points}Let $K\subseteq \mathcal{X}$ be any (non--empty)
compact convex subset of a locally convex space $\mathcal{X}$. Then we have
that: \newline
\emph{(i)} If $K$ is metrizable then the set $\mathcal{E}(K)$ of extreme
points of $K$ forms a $G_{\delta }$ set.\newline
\emph{(ii)} If $K$ is the closed convex hull of $Z\subseteq K$ then $%
\mathcal{E}(K)$ is included in the closure of $Z$.
\end{theorem}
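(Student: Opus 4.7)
My plan is to treat the two parts separately, as they involve genuinely different techniques.

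For part (i), the idea is to describe $K\setminus \mathcal{E}(K)$ as a countable union of closed sets. Fix a metric $d$ generating the topology on $K$, which exists by hypothesis. A point $x\in K$ fails to be extreme precisely when it admits a non-trivial midpoint representation, i.e., $x=(y+z)/2$ with $y\neq z$ in $K$. Quantifying $y\neq z$, this means there exists $n\in\mathbb{N}$ with $d(y,z)\geq 1/n$. Accordingly I would define
\begin{equation*}
F_{n}:=\left\{x\in K\;:\;\exists\,y,z\in K\text{ with }d(y,z)\geq 1/n\text{ and }x=\tfrac{1}{2}(y+z)\right\},
\end{equation*}
so that $K\setminus \mathcal{E}(K)=\bigcup_{n\in\mathbb{N}}F_{n}$. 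The key step is verifying that each $F_{n}$ is closed in $K$: take $x_{k}\in F_{n}$ with $x_{k}\to x$, write $x_{k}=(y_{k}+z_{k})/2$ with $d(y_{k},z_{k})\geq 1/n$, and use compactness of $K$ together with metrizability to extract subsequences $y_{k}\to y$, $z_{k}\to z$ in $K$; continuity of addition and of $d$ then yields $x=(y+z)/2$ and $d(y,z)\geq 1/n$, so $x\in F_{n}$. Hence $\mathcal{E}(K)$ is a countable intersection of open sets, i.e., a $G_{\delta}$ set.

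For part (ii), which is Milman's partial converse to the Krein--Milman theorem (Theorem \ref{theorem Krein--Millman}), I would proceed by contradiction. Suppose $\hat{x}\in \mathcal{E}(K)$ does not belong to the closed set $\overline{Z}$. I want to reach a contradiction with extremality. Pick a symmetric convex open neighbourhood $V$ of $0$ in the locally convex space $\mathcal{X}$ such that $(\hat{x}+V)\cap \overline{Z}=\emptyset$. Cover the compact set $\overline{Z}$ by finitely many translates $z_{i}+V$, $i=1,\ldots,n$, with $z_{i}\in \overline{Z}$, and set $Z_{i}:=\overline{Z}\cap(z_{i}+\overline{V})$. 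Each $Z_{i}$ is compact, so $K_{i}:=\overline{\mathrm{co}}(Z_{i})$ is a compact convex subset of $K$, and
\begin{equation*}
K=\overline{\mathrm{co}}(Z)=\overline{\mathrm{co}}(Z_{1}\cup\cdots\cup Z_{n})=\mathrm{co}(K_{1}\cup\cdots\cup K_{n}),
\end{equation*}
the last equality using that the convex hull of a finite union of compact convex sets in a Hausdorff locally convex space is compact, hence closed. Therefore $\hat{x}=\sum_{i}\lambda_{i}x_{i}$ for some $\lambda_{i}\geq 0$ with $\sum_{i}\lambda_{i}=1$ and $x_{i}\in K_{i}$, and extremality of $\hat{x}$ forces $\hat{x}=x_{i_{0}}\in K_{i_{0}}\subseteq z_{i_{0}}+\overline{V}$ for some $i_{0}$. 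This places $\hat{x}$ within $\overline{V}$ of a point of $\overline{Z}$, contradicting the separation $(\hat{x}+V)\cap\overline{Z}=\emptyset$ provided $V$ was chosen small enough (replacing $V$ by a smaller symmetric convex open set absorbed into the original $V$ guarantees $\overline{V}$ lies strictly inside the separating neighbourhood).

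The main obstacle is part (ii): the combinatorial covering argument with the symmetric convex neighbourhood $V$ is the delicate step, and one must be careful that $\mathrm{co}(K_{1}\cup\cdots\cup K_{n})$ is indeed closed (equivalently compact) in a general locally convex space, which requires the standard fact that the continuous image of the compact simplex $\{(\lambda_{1},\ldots,\lambda_{n})\,:\,\lambda_{i}\geq 0,\sum\lambda_{i}=1\}\times K_{1}\times\cdots\times K_{n}$ under $(\lambda,x_{1},\ldots,x_{n})\mapsto\sum_{i}\lambda_{i}x_{i}$ is compact. Part (i) is essentially mechanical once the reformulation in terms of midpoints and the metric is set up.
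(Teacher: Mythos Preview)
Your proposal is correct on both parts. The paper itself does not give a proof of this theorem but simply cites Phelps \cite[Propositions 1.3 and 1.5]{Phe}; your arguments are precisely the standard ones found there, so there is nothing to compare.
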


\noindent Property (i) can be found in \cite[Proposition 1.3]{Phe} and only
needs that $\mathcal{X}$ is a topological vector space, whereas the second
statement (ii) is a classical result obtained by Milman, see \cite[%
Proposition 1.5]{Phe}.

Theorem \ref{theorem Krein--Millman} restricted to finite dimensions is a
classical result of Minkowski%
\index{Minkowski theorem} which, for any $x\in K$ in (non--empty) compact
convex subset $K\subseteq \mathcal{X}$, states the existence of a finite
number of extreme points $%
\hat{x}_{1},\ldots ,\hat{x}_{k}\in \mathcal{E}(K)$ and positive numbers $\mu
_{1},\ldots ,\mu _{k}\geq 0$ with $\Sigma _{j=1}^{k}\mu _{j}=1$ such that 
\begin{equation}
x=\overset{k}{\sum\limits_{j=1}}\mu _{j}\hat{x}_{j}.  \label{barycenter1}
\end{equation}%
To this\ simple decomposition we can associate a probability measure, i.e.,
a \emph{normalized positive Borel regular measure}, $\mu $\ on $K$.

Indeed, the Borel sets of any set $K$ are elements of the $\sigma $--algebra 
$\mathfrak{B}$ generated by closed -- or open -- subsets of $K$. Positive
Borel regular measures are the positive countably additive set functions $%
\mu $ over $\mathfrak{B}$ satisfying 
\begin{equation*}
\mu \left( B\right) =\sup \left\{ \mu \left( C\right) :C\subseteq B,\text{ }C%
\text{ closed}\right\} =\inf \left\{ \mu \left( O\right) :B\subseteq O,\text{
}O\text{ open}\right\}
\end{equation*}%
for any Borel subset $B\in \mathfrak{B}$ of $K$. If $K$ is compact then any
positive Borel regular measure $\mu $ corresponds (one--to--one) to an
element of the set $M^{+}(K)$ of Radon measures with $\mu \left( K\right)
=\left\Vert \mu \right\Vert $ and we write 
\begin{equation}
\mu \left( h\right) =\int_{K}\mathrm{d}\mu (\hat{x})\;h\left( \hat{x}\right)
\label{barycenter1bis}
\end{equation}%
for any continuous function $h$ on $K$. A probability measure%
\index{Probability measure} $\mu \in M_{1}^{+}(K)$ is per definition a
positive Borel regular measure $\mu \in M^{+}(K)$ which is \emph{normalized}%
: $\left\Vert \mu \right\Vert =1$.

\begin{remark}
The set $M_{1}^{+}(K)$ of probability measures on $K$ can also be seen as
the set of states on the commutative $C^{\ast }$--algebra $C(K)$ of
continuous functionals on the compact set $K$, by\ the Riesz--Markov theorem.
\end{remark}

Therefore, using the probability measure $\mu _{x}\in M_{1}^{+}(K)$\ on $K$
defined by 
\begin{equation*}
\mu _{x}=\overset{k}{\sum\limits_{j=1}}\mu _{j}\delta _{%
\hat{x}_{j}}
\end{equation*}%
with $\delta _{y}$ being the Dirac -- or point -- mass\footnote{$\delta _{y}$
is the Borel measure such that for any Borel subset $B\in \mathfrak{B}$ of $%
K $, $\delta _{y}(B)=1$ if $y\in B$ and $\delta _{y}(B)=0$ if $y\notin B$.}
at $y$, Equation (\ref{barycenter1}) can be seen as an integral defined by (%
\ref{barycenter1bis}) for the probability measure $\mu _{x}\in M_{1}^{+}(K)$%
: 
\begin{equation}
x=\int_{K}\mathrm{d}\mu _{x}(\hat{x})\;\hat{x}\ .  \label{barycenter2}
\end{equation}%
The point $x$ is in fact the \emph{barycenter} of the probability measure $%
\mu _{x}$. This notion is defined in the general case as follows (cf. \cite[%
Eq. (2.7) in Chapter I]{Alfsen} or \cite[p. 1]{Phe}):

\begin{definition}[Barycenters of probability measures in convex sets]
\label{def barycenter}%
\index{Barycenters|textbf}Let $K\subseteq \mathcal{X}$ be any (non--empty)
compact convex subset of a locally convex space $\mathcal{X}$ and let $\mu
\in M_{1}^{+}(K)$ be a probability measure on $K$. We say that $x\in K$ is
the barycenter\footnote{%
Other terminology existing in the literature: \textquotedblleft $x$ is
represented by $\mu $\textquotedblright , \textquotedblleft $x$ is the
resultant of $\mu $\textquotedblright .} of $\mu $ if, for all continuous
linear\footnote{%
Barycenters can also be defined in the same way via affine functionals
instead of linear functionals, see \cite[Proposition 4.1.1.]%
{BrattelliRobinsonI}.} functionals $h$ on $\mathcal{X}$, 
\begin{equation*}
h\left( x\right) =\int_{K}\mathrm{d}\mu (%
\hat{x})\;h\left( \hat{x}\right) .
\end{equation*}
\end{definition}

\noindent Barycenters are well--defined for \emph{all} probability measures
in convex compact subsets of locally convex spaces (cf. \cite[Propositions
1.1 and 1.2]{Phe}):

\begin{theorem}[Well-definiteness and uniqueness of barycenters]
\label{thm barycenter}\mbox{ }\newline
Let $K\subseteq \mathcal{X}$ be any (non--empty) compact subset of a locally
convex space $\mathcal{X}$ such that $\overline{\mathrm{co}\left( K\right) }$
is also compact. Then we have that: \newline
\emph{(i)} For any probability measure $\mu \in M_{1}^{+}(K)$ on $K$,$\ $%
there is a unique barycenter $x_{\mu }\in \overline{\mathrm{co}\left(
K\right) }$. In particular, if $K$ is convex then, for any $\mu \in
M_{1}^{+}(K)$, there is a unique barycenter $x_{\mu }\in K$. Moreover, the
map $\mu \mapsto x_{\mu }$ from $M_{1}^{+}(K)$ to $\overline{\mathrm{co}%
\left( K\right) }$ is affine and weak$^{\ast }$--continuous.\newline
\emph{(ii)} Conversely, for any $x\in \overline{\mathrm{co}\left( K\right) }$%
, there is a probability measure $\mu _{x}\in M_{1}^{+}(K)$ on $K$ with
barycenter $x$.
\end{theorem}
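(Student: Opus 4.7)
The plan is to establish (i) and (ii) separately, deriving them from standard approximation arguments combined with the compactness hypothesis on $\overline{\mathrm{co}(K)}$ and the fact that $\mathcal{X}^{\ast }$ separates points of $\mathcal{X}$ (which holds since $\mathcal{X}$ is Hausdorff by Definition \ref{space}, via Hahn--Banach).

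For the uniqueness part of (i), if two points $x_{1},x_{2}\in \overline{\mathrm{co}(K)}$ both satisfy Definition \ref{def barycenter} for the same $\mu $, then $h(x_{1})=h(x_{2})$ for every continuous linear functional $h\in \mathcal{X}^{\ast }$, so $x_{1}=x_{2}$. For existence, I would first handle the discrete case: any $\mu =\sum_{j=1}^{n}\lambda _{j}\delta _{x_{j}}$ with $x_{j}\in K$, $\lambda _{j}\geq 0$, $\sum \lambda _{j}=1$ admits the obvious barycenter $\sum_{j}\lambda _{j}x_{j}\in \mathrm{co}(K)$. For general $\mu \in M_{1}^{+}(K)$, the plan would be to approximate $\mu $ in the weak$^{\ast }$--topology by a net of such discrete probability measures $\{\mu _{\alpha }\}$, which is possible because discrete probability measures are weak$^{\ast }$--dense in $M_{1}^{+}(K)$ (this follows by partitioning $K$ into finer and finer Borel pieces and picking representatives). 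The corresponding net of barycenters $\{x_{\mu _{\alpha }}\}\subseteq \mathrm{co}(K)\subseteq \overline{\mathrm{co}(K)}$ admits, by compactness of $\overline{\mathrm{co}(K)}$, a subnet converging to some $x_{\mu }\in \overline{\mathrm{co}(K)}$; continuity of each $h\in \mathcal{X}^{\ast }$ combined with $\mu _{\alpha }\rightarrow \mu $ weak$^{\ast }$ then yields $h(x_{\mu })=\int h\,\mathrm{d}\mu $, verifying Definition \ref{def barycenter}. Affinity of $\mu \mapsto x_{\mu }$ is immediate from linearity of the defining relation, while weak$^{\ast }$--continuity is obtained from the same type of argument: $\mu _{\alpha }\rightarrow \mu $ weak$^{\ast }$ implies $h(x_{\mu _{\alpha }})\rightarrow h(x_{\mu })$ for all $h\in \mathcal{X}^{\ast }$, and on the compact set $\overline{\mathrm{co}(K)}$ the weak topology coincides with the relative topology inherited from $\mathcal{X}$ (any continuous bijection from a compact space to a Hausdorff space is a homeomorphism).

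For (ii), I would use (i) together with a compactness argument in the opposite direction. Given $x\in \overline{\mathrm{co}(K)}$, choose a net $\{x_{\alpha }\}\subseteq \mathrm{co}(K)$ converging to $x$; each $x_{\alpha }$ is the barycenter of some discrete measure $\mu _{\alpha }\in M_{1}^{+}(K)$ by the discrete case just described. Since $K$ is compact, $M_{1}^{+}(K)$ is weak$^{\ast }$--compact (Banach--Alaoglu applied to the dual of $C(K)$ via the Riesz--Markov identification), so a subnet $\mu _{\beta }$ converges weak$^{\ast }$ to some $\mu _{x}\in M_{1}^{+}(K)$. The weak$^{\ast }$--continuity of the barycenter map established in (i) then gives $x=\lim x_{\alpha }=\lim x_{\mu _{\beta }}=x_{\mu _{x}}$, which finishes the argument.

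The main obstacle I anticipate is making the existence argument in (i) fully rigorous in the possibly non--metrizable setting: one has to handle carefully the weak$^{\ast }$--density of discrete probability measures in $M_{1}^{+}(K)$ (working with Borel partitions and invoking regularity of Radon measures), and to justify the passage from pointwise convergence on $\mathcal{X}^{\ast }$ to convergence in the original topology of $\mathcal{X}$ on the compact set $\overline{\mathrm{co}(K)}$. Everything else reduces to essentially formal manipulations using linearity of integration and the separation property of $\mathcal{X}^{\ast }$.
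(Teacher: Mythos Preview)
The paper does not give its own proof of this theorem; it simply records it as a standard result and refers to \cite[Propositions 1.1 and 1.2]{Phe}. Your sketch is a correct and standard route to the statement: uniqueness from separation by $\mathcal{X}^{\ast}$, existence by approximation with discrete measures and compactness of $\overline{\mathrm{co}(K)}$, and (ii) via weak$^{\ast}$--compactness of $M_{1}^{+}(K)$ and the continuity of the barycenter map just established. The only point worth tightening is the weak$^{\ast}$--density of discrete probability measures in $M_{1}^{+}(K)$: rather than arguing via Borel partitions, you can simply invoke Krein--Milman for the weak$^{\ast}$--compact convex set $M_{1}^{+}(K)$, whose extreme points are the Dirac masses. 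For comparison, Phelps' argument for existence in (i) proceeds instead through the finite intersection property of the closed half--spaces $\{y:\,h(y)\geq \mu(h)\}$, $h\in\mathcal{X}^{\ast}$, which avoids the approximation step altogether; your approach has the advantage that weak$^{\ast}$--continuity of $\mu\mapsto x_{\mu}$ falls out of the same machinery.
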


Therefore, we write the barycenter $x_{\mu }$ of any probability measure $%
\mu $ in $K$ as%
\begin{equation*}
x_{\mu }=\int_{K}\mathrm{d}\mu (\hat{x})\;\hat{x},
\end{equation*}%
where the integral has to be understood in the weak sense. By Definition \ref%
{def barycenter}, it means that $h\left( x_{\mu }\right) $\ can be
decomposed by the probability measure $\mu \in M_{1}^{+}(K)$ provided $h$ is
a continuous linear functional. In fact, this last property can also be
extended to all affine upper semi--continuous functionals on $K$, see, e.g., 
\cite[Corollary 4.1.18.]{BrattelliRobinsonI} together with \cite[Theorem 1.12%
]{Rudin}:

\begin{lemma}[Barycenters and affine maps]
\label{Corollary 4.1.18.}\mbox{ }\newline
Let $K\subseteq \mathcal{X}$ be any (non--empty) compact convex subset of a
locally convex space $\mathcal{X}$. Then, for any probability measure $\mu
\in M_{1}^{+}(K)$ on $K$ with barycenter $x_{\mu }\in K$ and for any affine
upper semi--continuous functional $h$ on $K$, 
\begin{equation*}
h\left( x_{\mu }\right) =\int_{K}\mathrm{d}\mu \left( \hat{x}\right)
\;h\left( \hat{x}\right) .
\end{equation*}
\end{lemma}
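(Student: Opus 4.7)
The plan is to reduce the statement to the defining property of a barycenter (which handles continuous linear functionals) via two successive extensions: first from continuous linear to continuous affine functionals, and then from continuous affine to affine upper semi--continuous functionals.

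For the first extension, I would observe that any continuous affine functional $g$ on $K$ can be written in the form $g = \ell|_K + c$, where $\ell$ is a continuous linear functional on $\mathcal{X}$ and $c\in \mathbb{R}$. Indeed, by the Hahn--Banach theorem in the form applicable to locally convex spaces (using that $\mathcal{X}^{\ast }$ separates points, cf. Theorem \ref{thm locally convex space}), the graph of any continuous affine $g$ can be extended to a continuous affine functional on $\mathcal{X}$. Using Definition \ref{def barycenter} on $\ell$ together with $\mu (K)=1$, one then immediately obtains the identity
\begin{equation*}
g(x_{\mu })=\int_{K}\mathrm{d}\mu (\hat{x})\;g(\hat{x})
\end{equation*}
for every continuous affine $g$ on $K$.

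For the second extension, the key fact I would invoke is that any affine upper semi--continuous functional $h$ on the compact convex set $K$ is the pointwise infimum of the (downward directed) family
\begin{equation*}
\mathcal{G}_{h}:=\{g:K\rightarrow \mathbb{R}\;:\;g\ \text{continuous and affine, }g\geq h\text{ on }K\}.
\end{equation*}
This is a standard consequence of the Hahn--Banach separation theorem applied to the (closed, convex) hypograph of $h$ in $\mathcal{X}\times \mathbb{R}$: any point strictly above the hypograph can be separated from it by a continuous affine functional, and the separating hyperplane may be chosen to dominate $h$ on $K$. Note that $h$ is bounded on $K$: upper semi--continuity and compactness give an upper bound, while affinity combined with compactness of $K$ (and hence of $\mathcal{E}(K)$'s closed convex hull) yields a lower bound through the Krein--Milman theorem (Theorem \ref{theorem Krein--Millman}) applied to $-h$ considered on the extreme points; alternatively, one fixes one element $g_{0}\in \mathcal{G}_{h}$ and uses $h\leq g_{0}$ together with a lower bound obtained by another Hahn--Banach argument on the (closed) epigraph of the concave function $h$.

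Once this infimum representation is available, I would conclude by combining the first extension with a monotone/dominated convergence argument. For each $g\in \mathcal{G}_{h}$,
\begin{equation*}
g(x_{\mu })=\int_{K}\mathrm{d}\mu (\hat{x})\;g(\hat{x})\geq \int_{K}\mathrm{d}\mu (\hat{x})\;h(\hat{x}),
\end{equation*}
so taking the infimum over $\mathcal{G}_{h}$ yields $h(x_{\mu })\geq \int h\,\mathrm{d}\mu $. For the reverse inequality I would pass to a countable decreasing sequence $\{g_{n}\}\subseteq \mathcal{G}_{h}$ with $g_{n}\downarrow h$ pointwise (available either by metrizability of $K$ via Theorem \ref{Metrizability}, when $\mathcal{X}$ is separable, or in general by working directly with the directed family and using that bounded monotone nets converge in $L^{1}(\mu )$); Lebesgue's dominated convergence theorem (dominated by, e.g., $g_{1}-h$ which is bounded) then gives $\int g_{n}\,\mathrm{d}\mu \rightarrow \int h\,\mathrm{d}\mu $, whence $h(x_{\mu })\leq \int h\,\mathrm{d}\mu $.

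The main obstacle is the infimum representation step in the general (non--metrizable) setting: producing enough continuous affine minorants from above of a merely upper semi--continuous affine functional requires a careful Hahn--Banach argument on the hypograph, and the passage to the limit must then be handled with directed families rather than sequences. In the metrizable case relevant to most applications in this monograph, this reduces to the much more transparent monotone convergence argument sketched above.
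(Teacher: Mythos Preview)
The paper does not supply its own proof of this lemma; it merely cites \cite[Corollary 4.1.18]{BrattelliRobinsonI} (together with \cite[Theorem 1.12]{Rudin} for the Hausdorff property). Your outline is essentially the standard argument found there: reduce to continuous affine functionals via the barycenter definition, then approximate the affine upper semi--continuous $h$ from above by continuous affine functions and pass to the limit.

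Two points worth tightening. First, the Hahn--Banach separation on the hypograph already produces majorants of the form $\ell|_{K}+c$ with $\ell\in\mathcal{X}^{\ast}$, so you need not worry about whether an arbitrary continuous affine function on $K$ extends to $\mathcal{X}$; the approximating family you actually use is automatically of this form. Second, your claim that $\mathcal{G}_{h}$ is downward directed is not obvious (the minimum of two affine functions is not affine); it follows from Edwards' separation theorem, which gives a continuous affine function between a convex u.s.c.\ minorant ($h$) and a concave continuous majorant ($\min(g_{1},g_{2})$). Once directedness is established, the monotone convergence argument goes through cleanly in full generality, without any metrizability assumption.
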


It is natural to ask whether, for any $x\in K$ in a convex set $K$, there is
a (possibly not unique) probability measure $\mu _{x}$ on $K$ supported on $%
\mathcal{E}(K)$ with barycenter $x$. Equation (\ref{barycenter2}) already
gives a first positive answer to that problem in the finite dimensional
case. The general case has been proven by Choquet, whose theorem is a
remarkable refinement of the Krein--Milman theorem (see, e.g., \cite[p. 14]%
{Phe}):

\begin{theorem}[Choquet]
\label{theorem choquet bis}\mbox{ }\newline
\index{Choquet theorem|textbf}Let $K\subseteq \mathcal{X}$ be any
(non--empty) metrizable compact convex subset of a locally convex space $%
\mathcal{X}$. Then, for any $x\in K$, there is a probability measure $\mu
_{x}\in M_{1}^{+}(K)$ on $K$ such that%
\begin{equation*}
\mu _{x}(\mathcal{E}(K))=1%
\text{\quad and\quad }x=\int_{K}\mathrm{d}\mu _{x}(\hat{x})\;\hat{x}.
\end{equation*}%
Recall that the integral above means that $x\in K$ is the barycenter of $\mu
_{x}$.
\end{theorem}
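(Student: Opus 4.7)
The plan is to follow the classical Choquet approach via maximal measures with respect to the Choquet order. Define on $M_1^+(K)$ the partial order $\mu \prec \nu$ iff $\mu(f) \leq \nu(f)$ for every continuous convex function $f:K\to\mathbb{R}$. A basic computation (using Jensen-type inequalities and the Hahn--Banach theorem to separate $x$ from $K\setminus\{x\}$ via continuous affine functionals) shows that if $\mu\prec\nu$ then the barycenters coincide: $x_\mu = x_\nu$. In particular, $\delta_x\prec\mu$ for every $\mu\in M_1^+(K)$ with barycenter $x$, so the set $M_x:=\{\mu\in M_1^+(K):x_\mu=x\}$ is non--empty (it contains $\delta_x$), and by Theorem \ref{thm barycenter} (i) it is convex and weak$^\ast$--compact.

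First, I would show that every $\prec$--chain in $M_x$ admits an upper bound in $M_x$. By weak$^\ast$--compactness and metrizability of $M_1^+(K)$ (which follows from separability of $C(K)$, itself a consequence of the metrizability of $K$), it suffices to handle countable chains, whose upper bound is obtained as a weak$^\ast$--limit. Zorn's lemma then yields a $\prec$--maximal element $\mu_x\in M_x$, whose barycenter is still $x$.

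Second, I would prove that $\mu_x(\mathcal{E}(K))=1$. Here metrizability of $K$ is essential: it implies separability of the space of continuous affine functionals on $K$, hence the existence of a \emph{strictly convex} continuous function $f_0$ on $K$, built for instance as
\begin{equation*}
f_0(x)=\sum_{n=1}^{\infty}2^{-n}\,\ell_n(x)^2
\end{equation*}
for a countable norm--dense sequence $\{\ell_n\}\subseteq \mathrm{A}(K)$ of continuous affine functionals separating points. Let $\hat{f}_0$ denote the upper semi--continuous concave envelope of $f_0$ on $K$. A Hahn--Banach/sandwich argument (a version of Mokobodzki's lemma) shows that for the maximal $\mu_x$ one has $\mu_x(\hat{f}_0-f_0)=0$, so $\mu_x$ is concentrated on the Borel set $\{f_0=\hat{f}_0\}$. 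The Bauer characterization $\mathcal{E}(K)=\{y\in K:f_0(y)=\hat{f}_0(y)\}$, valid precisely because $f_0$ is strictly convex, then gives $\mu_x(\mathcal{E}(K))=1$; note that $\mathcal{E}(K)$ is a Borel set (in fact a $G_\delta$) by Theorem \ref{lemma Milman} (i).

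The hard part will be the second step, and specifically the Mokobodzki-type identity $\mu_x(\hat{f}_0)=\mu_x(f_0)$ for $\prec$--maximal $\mu_x$. The subtlety is that one must reconcile the abstract maximality in the Choquet order with a quantitative statement about a single function. The standard device is: given any continuous convex $f$ on $K$, one constructs, via disintegration against the barycenter map and Hahn--Banach, a measure $\nu\in M_x$ with $\nu(f)\geq \mu_x(f)+[\mu_x(\hat f)-\mu_x(f)]$ and $\nu\succ\mu_x$; maximality forces $\mu_x(\hat f)=\mu_x(f)$, and then one specialises to $f=f_0$. Once this is established, the representation $x=\int_K \hat{x}\,\mathrm{d}\mu_x(\hat{x})$ is immediate from $\mu_x\in M_x$ and Definition \ref{def barycenter}.
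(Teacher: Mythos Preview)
Your outline is the standard Choquet proof via maximal measures in the Choquet order, and it is correct as sketched: the construction of a strictly convex $f_0$ from a separating sequence of affine functionals, the Zorn argument for maximal measures, and the Mokobodzki identity $\mu_x(\hat f_0)=\mu_x(f_0)$ for maximal $\mu_x$ are exactly the ingredients of the classical argument in Phelps' book.

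Note, however, that the paper does \emph{not} prove this theorem at all: it is stated in the appendix as a well--known result with a reference to \cite[p.~14]{Phe}, precisely the source your argument follows. So there is no alternative approach to compare against; you have simply supplied the textbook proof that the paper chose to cite rather than reproduce. One small remark: your justification that $\mu\prec\nu$ implies $x_\mu=x_\nu$ can be stated more directly than via Hahn--Banach separation --- since both $\ell$ and $-\ell$ are convex for any continuous affine $\ell$, the order relation immediately gives $\mu(\ell)=\nu(\ell)$ for all such $\ell$, which is the barycenter equality by definition.
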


\begin{remark}[Choquet theorem and affine maps]
\label{remark choquet bis copy(1)}\mbox{ }\newline
By Lemma \ref{Corollary 4.1.18.}, the Choquet theorem can be used to
decompose any affine upper semi--continuous functional defined on the
metrizable compact convex subset $K\subseteq \mathcal{X}$ w.r.t. extreme
points of $K$.
\end{remark}

\begin{remark}[Choquet theorem for non--metrizable $K$]
\label{remark choquet bis}\mbox{ }\newline
If the (non--empty) compact convex subset $K\subseteq \mathcal{X}$ is not
metrizable then $\mathcal{E}(K)$ may not form a Borel set. The Choquet
theorem (Theorem \ref{theorem choquet bis}) stays, however, valid under the
modification that $\mu _{x}$ is pseudo--supported by $\mathcal{E}(K)$ which
means that $\mu _{x}(B)=1$ for all Baire sets $B\supseteq \mathcal{E}(K)$.
This result is known as the the Choquet--Bishop--de Leeuw theorem%
\index{Choquet--Bishop--de Leeuw theorem}, see \cite[p. 17]{Phe}.
\end{remark}

\noindent Note that the probability measure $\mu _{x}$ of Theorem \ref%
{theorem choquet bis} is a priori not unique. For instance, in the
2--dimensional plane, simplices (points, segments, and triangles) are
uniquely decomposed in terms of their extreme points, i.e., they are
uniquely represented by a convex combination of extreme points. But this
decomposition is not anymore unique for a square. In fact, uniqueness of the
decomposition given in Theorem \ref{theorem choquet bis} is related to the
theory of simplices.

To define them in the general case, let $\mathfrak{S}$ be a compact convex
set of a locally convex real space $\mathcal{X}$. Without loss of generality
assume that the compact convex set $\mathfrak{S}$ is included in a closed
hyper--plane which does not contain the origin\footnote{%
Otherwise, we embed $\mathcal{X}$ as $\mathcal{X}\times \{1\}$ in $\mathcal{X%
}\times \mathbb{R}$.}. Let 
\begin{equation*}
\mathfrak{K}:=\left\{ \alpha x:\alpha \geq 0,\ x\in \mathfrak{S}\right\}
\end{equation*}%
be the\ cone with base $\mathfrak{S}$. Recall that the cone $\mathfrak{K}$
induces a partial ordering on $\mathcal{X}$ by using the definition $x\geqq
y $ iff $x-y\in \mathfrak{K}$. A \emph{least upper bound} for $x$ and $y$ is
an element $x\vee y\geqq x,y$ satisfying $w\geqq x\vee y$ for all $w$ with $%
w\geqq x,y$. Then a simplex is defined as follows:

\begin{definition}[Simplices]
\label{gamm regularisation copy(1)}\mbox{ }\newline
\index{Simplex|textbf}The (non--empty) compact convex set $\mathfrak{S}$ is
a simplex whenever $\mathfrak{K}$ is a lattice with respect to the partial
ordering $\geqq $. This means that each pair $x,y\in \mathfrak{K}$ has a
least upper bound $x\vee y\in \mathfrak{K}$.
\end{definition}

\noindent Observe that a simplex can also be defined for non--compact convex
sets but we are only interested here in compact simplices. Such simplices
are particular examples of \emph{simplexoids}%
\index{Simplexoid}, i.e., compact convex sets whose closed proper faces are
simplices. Recall that, here, a face $F$ of a convex set $K$ is defined to
be a subset of $K$ with the property that if $\rho =\lambda _{1}\rho
_{1}+\cdots +\lambda _{n}\rho _{n}\in F$ with $\rho _{1},\ldots ,\rho
_{n}\in K$, $\lambda _{1},\ldots ,\lambda _{n}\in (0,1)$ and $\lambda
_{1}+\cdots +\lambda _{n}=1$ then $\rho _{1},\ldots ,\rho _{n}\in F$.

The definition of simplices above agrees with the usual definition in finite
dimensions as the $n$--dimensional simplex $\{(\lambda _{1},\lambda
_{2},\cdots ,\lambda _{n+1}),\Sigma _{j}\lambda _{j}=1\}$ is the base of the 
$(n+1)$--dimensional cone $\{(\lambda _{1},\lambda _{2},\cdots ,\lambda
_{n+1}),\lambda _{j}\geq 0\}$. In fact, for all metrizable simplices, the
probability measure $\mu _{x}$ of Theorem \ref{theorem choquet bis} is
unique and conversely, if $\mu _{x}$\ is always uniquely defined then the
corresponding metrizable compact convex set is a simplex (see, e.g., \cite[%
p. 60]{Phe}):

\begin{theorem}[Choquet]
\label{theorem choquet bis copy(1)}\mbox{ }\newline
\index{Simplex}Let $\mathfrak{S}\subseteq \mathcal{X}$ be any (non--empty)
closed convex metrizable subset of a locally convex space $\mathcal{X}$.
Then $\mathfrak{S}$ is a simplex iff, for any $x\in \mathfrak{S}$, there is
a unique probability measure $\mu _{x}\in M_{1}^{+}(\mathfrak{S})$ on $%
\mathfrak{S}$ such that%
\begin{equation*}
\mu _{x}(\mathcal{E}(\mathfrak{S}))=1%
\text{\quad and\quad }x=\int_{\mathfrak{S}}\mathrm{d}\mu _{x}(\hat{x})\;\hat{%
x}.
\end{equation*}
\end{theorem}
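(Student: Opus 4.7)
The plan is to follow the classical Choquet--Meyer approach, for which the existence half of the statement is already handed to us by Theorem~\ref{theorem choquet bis}, so the real content is the equivalence between the lattice property of the cone $\mathfrak{K}$ generated by $\mathfrak{S}$ and the uniqueness of representing measures supported on $\mathcal{E}(\mathfrak{S})$. First I would introduce the \emph{Choquet ordering} on the set $M^{+}(\mathfrak{S})$ of positive Radon measures, declaring $\mu \succ \nu$ whenever $\mu(f)\geq \nu(f)$ for every continuous convex function $f:\mathfrak{S}\to\mathbb{R}$. Since any continuous affine functional is simultaneously convex and concave, two measures with the same barycenter are automatically comparable through a common upper bound if one exists, and by a Zorn's lemma argument every measure is dominated by some maximal measure in this ordering.

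The next step would be to characterize maximal measures in the metrizable case: a probability measure $\mu\in M_{1}^{+}(\mathfrak{S})$ is maximal for $\succ$ if and only if $\mu(\mathcal{E}(\mathfrak{S}))=1$. The forward implication uses that, in the metrizable case, $\mathcal{E}(\mathfrak{S})$ is a $G_{\delta}$ set (Theorem~\ref{lemma Milman}~(i)) and that one can construct, for any non-extreme point $x$, a continuous convex function strictly larger at $x$ than its average over a non-trivial decomposition of $x$; conversely, if $\mu$ is concentrated on $\mathcal{E}(\mathfrak{S})$, then any $\nu \succ \mu$ must coincide with $\mu$ by testing against $f$ and $-f$ for affine continuous $f$ and using Cartier--Fell--Meyer. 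Combined with the existence of a dominating maximal measure, this reproves Theorem~\ref{theorem choquet bis} and identifies extremal representing measures with maximal measures in the Choquet order.

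With these preparations, I would prove the two implications of the simplex characterization separately. For the \emph{only if} direction, assume $\mathfrak{S}$ is a simplex, so that $\mathfrak{K}$ is a lattice. Given two maximal probability measures $\mu_1,\mu_2\in M_{1}^{+}(\mathfrak{S})$ with the same barycenter $x$, view them as elements of the cone of measures; the lattice property of $\mathfrak{K}$ lifts (through the affine weak$^{\ast}$--continuous barycenter map of Theorem~\ref{thm barycenter}) to the existence of a common upper bound $\mu_{1}\vee \mu_{2}$ in $M^{+}(\mathfrak{S})$ for the Choquet ordering; but since both $\mu_{1}$ and $\mu_{2}$ are maximal, they must coincide with this upper bound, hence with each other. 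For the \emph{if} direction, assume uniqueness of extremal representing measures for every $x\in\mathfrak{S}$. Given $\alpha_{1}x_{1},\alpha_{2}x_{2}\in\mathfrak{K}$, I would combine the unique maximal representations of $x_{1}$ and $x_{2}$ (scaled by $\alpha_{1},\alpha_{2}$) to construct a candidate least upper bound in $\mathfrak{K}$, checking that uniqueness of representing measures forces the construction to be well defined and minimal.

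The main obstacle will be the \emph{only if} direction: producing the common upper bound $\mu_{1}\vee \mu_{2}$ in $M^{+}(\mathfrak{S})$. One cannot simply invoke the lattice property of $\mathfrak{K}$ point-wise; the passage from the lattice structure of $\mathfrak{K}\subseteq \mathcal{X}$ to a lattice structure on the cone of Radon measures requires the Cartier--Fell--Meyer theorem, which states that $\mu \succ \nu$ if and only if there exists a dilation (a transition kernel) $P$ with $\nu = P\mu$ whose barycenter map sends each point to itself. Using this characterization one lifts the pointwise least upper bound in $\mathfrak{K}$ to a measure-theoretic one by a disintegration argument, and the metrizability of $\mathfrak{S}$ ensures that the required Borel-measurable selections exist. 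Once the lattice of measures is in place the rest of the argument is essentially a bookkeeping exercise; the delicate point is really this lifting from the geometry of $\mathfrak{K}$ to the lattice structure on $M^{+}(\mathfrak{S})$.
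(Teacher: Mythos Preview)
The paper does not actually prove this theorem; it is stated as a standard result and the reader is referred to \cite[p.~60]{Phe} (Phelps' lecture notes). So there is no ``paper's own proof'' to compare against beyond the classical Choquet--Meyer argument found in that reference.

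Your outline follows that classical route and is broadly correct, but one step is misdescribed. In the \emph{only if} direction you write that ``the lattice property of $\mathfrak{K}$ lifts (through the affine weak$^{\ast}$--continuous barycenter map of Theorem~\ref{thm barycenter}) to the existence of a common upper bound $\mu_{1}\vee\mu_{2}$.'' The barycenter map goes from measures to points, not the other way, so it cannot by itself transport the lattice structure upward. What is actually used is the \emph{Riesz decomposition property} of $\mathfrak{K}$ (equivalent to $\mathfrak{K}$ being a lattice): given finite discrete measures $\mu_{1},\mu_{2}$ with the same barycenter, one decomposes their atoms against each other inside $\mathfrak{K}$ to build a common refinement dominating both in the Choquet order; the passage to general measures then goes through approximation and Cartier--Fell--Meyer, as you correctly flag at the end. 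Your final paragraph identifies the right difficulty and the right tool---it is only the earlier phrasing via the barycenter map that would not work as written.

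The \emph{if} direction (uniqueness $\Rightarrow$ simplex) is sketched too thinly to assess: you need to show that the map $x\mapsto\mu_{x}$ is affine in a suitable sense so that the candidate $\alpha_{1}\mu_{x_{1}}\vee\alpha_{2}\mu_{x_{2}}$ (taken pointwise on $\mathcal{E}(\mathfrak{S})$, or via the Riesz decomposition in $M^{+}(\mathfrak{S})$) has a well-defined barycenter yielding the least upper bound in $\mathfrak{K}$. This is where the metrizability hypothesis and the $G_{\delta}$ structure of $\mathcal{E}(\mathfrak{S})$ are genuinely needed.
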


\noindent Compact and metrizable convex sets for which the integral
representation in Theorem \ref{theorem choquet bis copy(1)} is unique are
also called \emph{Choquet simplices}:

\begin{definition}[Choquet simplex]
\label{gamm regularisation copy(3)}\mbox{ }\newline
\index{Simplex!Choquet|textbf}A metrizable simplex $\mathfrak{S}$ is a
Choquet simplex whenever the decomposition of $\mathfrak{S}$ on $\mathcal{E}(%
\mathfrak{S})$ given by Theorem \ref{theorem choquet bis} is unique. A
Choquet simplex can also be defined when $\mathfrak{S}$ is not metrizable,
using Remark \ref{remark choquet bis}.
\end{definition}

\noindent In this monograph we are only interested in metrizable compact
convex set on which Theorem \ref{theorem choquet bis copy(1)} is applied.
Therefore, all our examples of simplices are in fact Choquet simplices.

Two further special types of simplices are of particular importance: The 
\emph{Bauer }and the \emph{Poulsen} simplices. The first one is defined as
follows:

\begin{definition}[Bauer simplex]
\label{gamm regularisation copy(2)}\mbox{ }\newline
\index{Simplex!Bauer|textbf}The simplex $\mathfrak{S}$ is a Bauer simplex
whenever its set $\mathcal{E}(\mathfrak{S})$ of extreme points is closed.
\end{definition}

\noindent A compact Bauer simplex $\mathfrak{S}$ has the interesting
property that it is affinely homeomorphic to the set of states on the
commutative $C^{\ast }$--algebra $C(\mathcal{E}(\mathfrak{S}))$ (see, e.g., 
\cite[Corollary II.4.2]{Alfsen}):

\begin{theorem}[Bauer]
\label{theorem Bauer}\mbox{ }\newline
\index{Simplex!Bauer}Let $\mathfrak{S}\subseteq \mathcal{X}$ be any compact
metrizable Bauer Simplex of a locally convex space $\mathcal{X}$. Then the
map $x\mapsto \mu _{x}$ defined by Theorem \ref{theorem choquet bis copy(1)}
from $\mathfrak{S}$ to the set $M_{1}^{+}(\mathcal{E}(\mathfrak{S}))$ of
probability measures\footnote{%
I.e. the set of states on the commutative $C^{\ast }$--algebra $C(\mathcal{E}%
(S))$ of continuous functionals on the compact set $\mathcal{E}(S)$.} on $%
\mathcal{E}(\mathfrak{S})$ is an affine homeomorphism.
\end{theorem}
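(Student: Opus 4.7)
The plan is to verify separately that the map $x\mapsto \mu_x$ is well-defined into $M_1^+(\mathcal{E}(\mathfrak{S}))$, bijective, affine, and a homeomorphism, using the Choquet decomposition theorem (Theorem \ref{theorem choquet bis copy(1)}) and properties of barycenters (Theorem \ref{thm barycenter}) as the main tools.

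First, I would observe that since $\mathfrak{S}$ is a metrizable simplex, Theorem \ref{theorem choquet bis copy(1)} gives a unique probability measure $\mu_x \in M_1^+(\mathfrak{S})$ such that $\mu_x(\mathcal{E}(\mathfrak{S}))=1$ and $x$ is the barycenter of $\mu_x$. The Bauer property (Definition \ref{gamm regularisation copy(2)}) ensures $\mathcal{E}(\mathfrak{S})$ is closed, hence compact as a closed subset of the compact set $\mathfrak{S}$. Thus $\mu_x$ can canonically be identified with an element of $M_1^+(\mathcal{E}(\mathfrak{S}))$, so the map $x\mapsto \mu_x$ from $\mathfrak{S}$ to $M_1^+(\mathcal{E}(\mathfrak{S}))$ is well-defined.

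Next, I would verify bijectivity and affinity. For injectivity, if $\mu_x=\mu_y$ then both $x$ and $y$ are the barycenter of the same probability measure, and barycenters are unique by Theorem \ref{thm barycenter} (i), so $x=y$. For surjectivity, given any $\mu \in M_1^+(\mathcal{E}(\mathfrak{S}))$, view $\mu$ as a probability measure on $\mathfrak{S}$ supported on $\mathcal{E}(\mathfrak{S})$; by Theorem \ref{thm barycenter} (i), $\mu$ has a unique barycenter $x_\mu \in \mathfrak{S}$. Uniqueness in Theorem \ref{theorem choquet bis copy(1)} then forces $\mu_{x_\mu} = \mu$. For affinity, if $x = \lambda x_1 + (1-\lambda) x_2$ with $\lambda \in [0,1]$, then $\lambda \mu_{x_1} + (1-\lambda)\mu_{x_2}$ is a probability measure on $\mathfrak{S}$ supported on $\mathcal{E}(\mathfrak{S})$ whose barycenter is $x$ (by affinity of the barycenter map, Theorem \ref{thm barycenter} (i)); uniqueness of Choquet decomposition gives $\mu_x = \lambda \mu_{x_1} + (1-\lambda)\mu_{x_2}$.

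Finally, I would establish the homeomorphism property. By Theorem \ref{thm barycenter} (i), the inverse map $\mu \mapsto x_\mu$ from $M_1^+(\mathcal{E}(\mathfrak{S}))$ (equipped with the weak$^\ast$--topology) to $\mathfrak{S}$ is weak$^\ast$--continuous. Since $M_1^+(\mathcal{E}(\mathfrak{S}))$ is weak$^\ast$--compact (the set of states on the commutative $C^\ast$--algebra $C(\mathcal{E}(\mathfrak{S}))$, via the Riesz--Markov identification) and $\mathfrak{S}$ is Hausdorff (being a subset of a locally convex space in the sense of Definition \ref{space}), the continuous bijection $\mu\mapsto x_\mu$ is automatically a homeomorphism, hence so is its inverse $x\mapsto \mu_x$. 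The main subtlety in the argument is the correct identification of $\mu_x$, originally defined as an element of $M_1^+(\mathfrak{S})$, with an element of $M_1^+(\mathcal{E}(\mathfrak{S}))$: this identification is only legitimate because the Bauer hypothesis guarantees that $\mathcal{E}(\mathfrak{S})$ is a Borel (in fact closed) subset of $\mathfrak{S}$, avoiding the pseudo-support complications recalled in Remark \ref{remark choquet bis}.
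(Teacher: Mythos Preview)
Your proof is correct and is the standard direct argument for Bauer's theorem. The paper itself does not give a proof of this statement; it is stated in the appendix as a known result with the citation ``see, e.g., \cite[Corollary II.4.2]{Alfsen}'' and no proof is provided. Your argument is essentially the classical one found in Alfsen and Phelps: use uniqueness of Choquet decomposition for bijectivity and affinity, then obtain the homeomorphism from weak$^\ast$--continuity of the barycenter map together with the compact--Hausdorff trick.
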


Bauer simplices are special simplices as the set of $\mathcal{E}(\mathfrak{S}%
)$ of extreme points of a simplex $\mathfrak{S}$ may not be closed. In fact,
E. T. Poulsen \cite{Poulsen} constructed in 1961 an example of a metrizable
simplex $\mathfrak{S}$ with $\mathcal{E}(\mathfrak{S})$ being dense in $%
\mathfrak{S}$. This simplex is now well--known as the Poulsen simplex
because it is unique \cite[Theorem 2.3.]{Lindenstrauss-etal} up to an affine
homeomorphism:

\begin{theorem}[Lindenstrauss--Olsen--Sternfeld]
\label{theorem Bauer copy(1)}\mbox{ }\newline
\index{Simplex!Poulsen}Every (non--empty) compact metrizable simplex $%
\mathfrak{S}$ with $\mathcal{E}(\mathfrak{S})$ being dense in $\mathfrak{S}$
is affinely homeomorphic to the Poulsen simplex.
\end{theorem}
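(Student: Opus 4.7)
The plan is to prove that any two compact metrizable simplices with dense extreme boundary are affinely homeomorphic by a back-and-forth construction modeled on Cantor's classical argument for countable dense linear orders without endpoints. Fix a reference Poulsen simplex $\mathfrak{P}$ and aim to construct an affine homeomorphism $\phi\colon \mathfrak{S}\to\mathfrak{P}$. Since $\mathfrak{S}$ is compact metrizable, both $\mathfrak{S}$ and $\mathfrak{P}$ admit countable dense subsets of their extreme boundaries; by Theorem \ref{lemma Milman} (i) the sets $\mathcal{E}(\mathfrak{S})$ and $\mathcal{E}(\mathfrak{P})$ are $G_{\delta}$ sets, so by hypothesis they are even dense $G_{\delta}$. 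Fix countable dense sequences $\{x_n\}\subseteq \mathcal{E}(\mathfrak{S})$ and $\{y_n\}\subseteq \mathcal{E}(\mathfrak{P})$ and a compatible metric on each simplex.

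The core of the argument is an inductive construction of finite affinely independent subsets $E_n\subseteq \mathcal{E}(\mathfrak{S})$ and $E_n'\subseteq \mathcal{E}(\mathfrak{P})$ with $E_n\subseteq E_{n+1}$, $E_n'\subseteq E_{n+1}'$, together with affine bijections $\phi_n\colon \mathrm{co}(E_n)\to\mathrm{co}(E_n')$ sending $E_n$ to $E_n'$, such that $\phi_{n+1}$ is $\varepsilon_n$-close to $\phi_n$ on $\mathrm{co}(E_n)$ for a summable sequence $\varepsilon_n\downarrow 0$. At odd stages, I would add the first $x_k$ not yet captured to $E_n$; the density of $\mathcal{E}(\mathfrak{P})$ then allows one to pick $y\in\mathcal{E}(\mathfrak{P})$ close to the ``target position'' forced by the affine constraints (i.e.\ close to what a continuous affine extension of $\phi_n$ would dictate for $x_k$), and set $E_{n+1}':=E_n'\cup\{y\}$. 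At even stages, the roles are swapped, ensuring every $y_k$ is eventually matched. The finite-dimensional simplices $\mathrm{co}(E_n)$ and $\mathrm{co}(E_n')$ are genuine simplices in $\mathfrak{S}$ and $\mathfrak{P}$ because, in a Choquet simplex, affinely independent extreme points generate a face that is itself a simplex; this uses Theorem \ref{theorem choquet bis copy(1)}.

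Having the $\phi_n$, I would verify that they converge uniformly on $\bigcup_n \mathrm{co}(E_n)$ to an affine map $\phi$ (using summability of $\varepsilon_n$ and the uniform continuity of affine maps on compact convex sets), and that $\phi$ extends by continuity to all of $\overline{\bigcup_n \mathrm{co}(E_n)}$. Density of $\{x_n\}$ in $\mathcal{E}(\mathfrak{S})$ combined with the Krein--Milman theorem (Theorem \ref{theorem Krein--Millman}) forces $\overline{\bigcup_n \mathrm{co}(E_n)}=\mathfrak{S}$, and symmetrically for $\mathfrak{P}$. The limit $\phi$ is therefore an affine continuous surjection; injectivity (hence homeomorphism, by compactness) follows from the back-and-forth construction, since the inverse maps $\phi_n^{-1}$ are treated symmetrically and converge to a continuous affine inverse.

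The main obstacle will be the inductive matching step: when a new extreme point $x$ is inserted, one needs a matching extreme point $y\in\mathcal{E}(\mathfrak{P})$ whose affine position relative to $\phi_n(E_n)$ approximates a prescribed target. The existence of such $y$ with arbitrary precision is the crucial use of the Poulsen hypothesis---extreme points are not merely plentiful but \emph{weak$^{\ast}$-dense} in the simplex, so they are dense near every prescribed interior point of every face. Quantifying this density and bounding how the choice of $y$ perturbs all previously defined values $\phi_n$ (in order to maintain the summable error budget $\{\varepsilon_n\}$) is the delicate geometric estimate at the heart of the argument; this is where the full force of the lattice structure of the cone $\mathfrak{K}$ from Definition \ref{gamm regularisation copy(1)} enters, since it is needed to realize prescribed convex decompositions by legitimate affine combinations of extreme points within $\mathfrak{P}$.
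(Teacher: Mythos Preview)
The paper does not prove this theorem; it is quoted as a classical result and attributed to \cite[Theorem 2.3]{Lindenstrauss-etal}. So there is no proof in the paper to compare against, and your proposal should be judged on its own.

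Your outline is indeed the strategy of Lindenstrauss--Olsen--Sternfeld: a back-and-forth construction between countable dense subsets of the extreme boundaries, building compatible affine maps on an increasing chain of finite-dimensional subsimplices and passing to the limit via Krein--Milman. The overall architecture is correct, and you have correctly isolated the crux of the argument.

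However, what you have written is a sketch, not a proof, and the gap is exactly where you say it is. The entire substance of the Lindenstrauss--Olsen--Sternfeld theorem lies in the ``matching step'': given a finite-dimensional simplex $\mathrm{co}(E_n')\subseteq\mathfrak{P}$ and a prescribed target point (dictated by the position of the new $x_k$ relative to $E_n$), one must find an \emph{extreme} point $y\in\mathcal{E}(\mathfrak{P})$ so that the extended affine map $\phi_{n+1}$ is $\varepsilon_n$-close to $\phi_n$ on $\mathrm{co}(E_n)$. Density of $\mathcal{E}(\mathfrak{P})$ in $\mathfrak{P}$ alone does not give this: the target is typically \emph{not} in $\mathfrak{P}$ at all (it is the affine image of $x_k$ under an extension of $\phi_n$, which a priori lives only in the ambient affine span), and even when it can be arranged to lie in $\mathfrak{P}$, choosing $y$ close to it does not automatically control the perturbation of $\phi_n$ on all of $\mathrm{co}(E_n)$. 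The actual argument in \cite{Lindenstrauss-etal} proceeds via a careful perturbation lemma (their Lemma~2.1) showing that in any metrizable simplex with dense extreme boundary, any affine embedding of a finite-dimensional simplex can be $\varepsilon$-perturbed to one sending vertices to extreme points; this is where the simplex hypothesis (uniqueness of representing measures) is genuinely used, and it is not a consequence of the general facts you cite from the appendix. Your last paragraph correctly flags this as ``the delicate geometric estimate at the heart of the argument,'' but flagging it is not proving it.
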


\noindent The original example given by Poulsen \cite{Poulsen} is not
explained here as we give in Section \ref{section set of states} a prototype
of the Poulsen simplex: The set $E_{%
\vec{\ell}}\subseteq \mathcal{U}^{\ast }$ of all $\mathbb{Z}_{\vec{\ell}%
}^{d} $--invariant states defined by (\ref{periodic invariant states}) for
any $\vec{\ell}\in \mathbb{N}^{d}$, see Theorem \ref{Thm Poulsen simplex}.

For more details on the Poulsen simplex we recommend \cite%
{Lindenstrauss-etal} where its specific properties are described. They also
show that the Poulsen simplex is, in a sense, complementary to the Bauer
simplices, see \cite[p. 164]{Alfsen} or \cite[Section 5]{Lindenstrauss-etal}.

\section{$\Gamma $--regularization of real functionals\label{Section gamma
regularization}%
\index{Gamma--regularization|textbf}}

The $\Gamma $--regularization of real functionals on a subset $K\subseteq 
\mathcal{X}$ is defined from the space $\mathrm{A}\left( \mathcal{X}\right) $
of all affine continuous real valued functionals on $\mathcal{X}$ as follows
(cf. \cite[Eq. (1.3) in Chapter I]{Alfsen} or \cite[Definition 2.1.1]%
{Schirotzek}):

\begin{definition}[$\Gamma $--regularization of real functionals]
\label{gamm regularisation}\mbox{ }\newline
For any real functional $h$ defined from a locally convex space $\mathcal{X}$
to $\left( -\infty ,\infty \right] $, its $\Gamma $--regularization $\Gamma
_{K}\left( h\right) $ on a subset $K\subseteq \mathcal{X}$ is the functional
defined as the supremum over all affine and continuous minorants\ from $%
\mathcal{X}$ to $\mathbb{R}$ of $h|_{K}$, i.e., for all $x\in \mathcal{X}$, 
\begin{equation*}
\Gamma _{K}\left( h\right) \left( x\right) :=\sup \left\{ m(x):m\in \mathrm{A%
}\left( \mathcal{X}\right) \;%
\text{and }m|_{K}\leq h|_{K}\right\} .
\end{equation*}
\end{definition}

\noindent If a functional $h$ is only defined on a subset $K\subseteq 
\mathcal{X}$ of a locally convex space $\mathcal{X}$ then we compute $\Gamma
_{\mathcal{X}}\left( h\right) $ by extending $h$ to the locally convex space 
$\mathcal{X}$ as follows:

\begin{definition}[Extension of functionals on a locally convex space $%
\mathcal{X}$]
\label{extension of functional}Any functional $h:K\subseteq \mathcal{X}%
\rightarrow \left( -\infty ,\infty \right] $ is seen as a map from $\mathcal{%
X}$ to $(-\infty ,\infty ]$ by the definition%
\begin{equation*}
h(x):=\left\{ 
\begin{array}{lcl}
h(x) & , & \mathrm{for\ }x\in K. \\ 
+\infty & , & \mathrm{for\ }x\in \mathcal{X}\backslash K.%
\end{array}%
\right.
\end{equation*}
\end{definition}

\noindent If $h$ is convex and lower semi--continuous on the closed and
convex subset $K\subseteq \mathcal{X}$ then its extension on $\mathcal{X}$
is also convex and lower semi--continuous. Moreover, in this case, $\Gamma _{%
\mathcal{X}}\left( h\right) =\Gamma _{K}\left( h\right) $ on $\mathcal{X}$.

Since the $\Gamma $--regularization $\Gamma _{K}\left( h\right) $ of a real
functional $h$ is a supremum of continuous functionals, $\Gamma _{K}\left(
h\right) $ is a convex and lower semi--continuous functional on $\mathcal{X}$%
. In fact, every convex and lower semi--continuous functional on $K$ equals
its $\Gamma $--regularization on $K$ (see, e.g., \cite[Proposition I.1.2.]%
{Alfsen} or \cite[Proposition 2.1.2]{Schirotzek}):

\begin{proposition}[$\Gamma $--regularization of lower semi--cont. convex
maps]
\label{lemma gamma regularisation}Let $h$ be any functional from a
(non--empty) closed convex subset $K\subseteq \mathcal{X}$ of a locally
convex space $\mathcal{X}$ to $\left( -\infty ,\infty \right] $. Then the
following statements are equivalent:\newline
\emph{(i)} $\Gamma _{K}\left( h\right) =h$ on $K$.\newline
\emph{(ii)} $h$ is a lower semi--continuous convex functional on $K$.
\end{proposition}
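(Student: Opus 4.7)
The plan is to prove the equivalence of (i) and (ii) by showing each direction separately, with (ii) $\Rightarrow$ (i) being the substantive one.

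For (i) $\Rightarrow$ (ii), I would argue directly from the definition: since $\Gamma_K(h)$ is, by construction, a pointwise supremum of the family $\mathrm{A}(\mathcal{X})$ of affine continuous real functionals on $\mathcal{X}$ (restricted to $K$), and each such affine continuous functional is itself both convex and lower semi-continuous, the supremum $\Gamma_K(h)$ inherits both of these properties. Hence if $h = \Gamma_K(h)$ on $K$, then $h$ is convex and lower semi-continuous on $K$.

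For (ii) $\Rightarrow$ (i), the inequality $\Gamma_K(h) \leq h$ on $K$ is immediate from Definition \ref{gamm regularisation}, since any affine continuous minorant $m \in \mathrm{A}(\mathcal{X})$ of $h|_K$ satisfies $m(x) \leq h(x)$ at each $x \in K$. The real work is to establish the reverse inequality $\Gamma_K(h)(x_0) \geq h(x_0)$ for each $x_0 \in K$. First I would extend $h$ to all of $\mathcal{X}$ by setting $h \equiv +\infty$ on $\mathcal{X} \setminus K$ as in Definition \ref{extension of functional}; because $K$ is closed and convex and $h$ is convex and lower semi-continuous on $K$, the extension remains convex and lower semi-continuous on $\mathcal{X}$. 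Consequently its epigraph
\begin{equation*}
\mathrm{epi}(h) := \{(y,t) \in \mathcal{X} \times \mathbb{R} : h(y) \leq t\}
\end{equation*}
is a closed convex subset of the locally convex space $\mathcal{X} \times \mathbb{R}$.

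Now fix $x_0 \in K$ and any real number $\alpha < h(x_0)$. Then $(x_0,\alpha) \notin \mathrm{epi}(h)$, and since $\mathrm{epi}(h)$ is closed convex and $\{(x_0,\alpha)\}$ is a compact convex set disjoint from it, the Hahn--Banach separation theorem in $\mathcal{X} \times \mathbb{R}$ yields a continuous linear functional $(y,t) \mapsto \ell(y) + \lambda t$ (with $\ell \in \mathcal{X}^{\ast}$ and $\lambda \in \mathbb{R}$) and a constant $c \in \mathbb{R}$ such that
\begin{equation*}
\ell(x_0) + \lambda \alpha \;<\; c \;\leq\; \ell(y) + \lambda t \qquad \text{for all } (y,t) \in \mathrm{epi}(h).
\end{equation*}
Letting $t \to +\infty$ at any fixed $y$ with $h(y) < \infty$ forces $\lambda \geq 0$. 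The crucial step is then to rule out $\lambda = 0$: if $\lambda = 0$, the inequality reads $\ell(x_0) < c \leq \ell(y)$ for every $y \in K$ with $h(y) < \infty$, which in the case $h(x_0) < \infty$ immediately contradicts taking $y = x_0$. Thus $\lambda > 0$, and dividing through we obtain an affine continuous functional $m(y) := (c - \ell(y))/\lambda$ satisfying $m \leq h$ on $K$ with $m(x_0) > \alpha$, yielding $\Gamma_K(h)(x_0) > \alpha$. Taking $\alpha \nearrow h(x_0)$ finishes the case $h(x_0) < \infty$.

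The main obstacle I anticipate is the degenerate case $h(x_0) = +\infty$, where the argument $\lambda = 0 \Rightarrow$ contradiction breaks down because $x_0$ lies outside the effective domain $\{h < \infty\}$. To handle it, I would argue as follows: pick any point $y_1 \in K$ with $h(y_1) < \infty$ (if none exists then $h \equiv +\infty$ on $K$ and any affine continuous functional trivially minorizes $h$, so $\Gamma_K(h) \equiv +\infty$). Given $M > 0$ arbitrary, separate the point $(x_0, M)$ from $\mathrm{epi}(h)$; a $\lambda > 0$ outcome yields as before an affine minorant with value $\geq M$ at $x_0$, while a $\lambda = 0$ outcome gives a continuous linear $\ell$ strictly separating $x_0$ from $\{h < \infty\}$, which can be added to a fixed affine minorant $m_1$ (obtained from the previous paragraph applied at $y_1$) with a suitable positive scaling to produce an affine continuous minorant of $h$ on $K$ taking value $\geq M$ at $x_0$. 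Hence $\Gamma_K(h)(x_0) = +\infty = h(x_0)$, completing the proof.
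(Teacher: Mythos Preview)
Your proof is correct and complete, including the careful handling of the $h(x_0)=+\infty$ case. The paper does not actually give a proof of this proposition: it simply declares it a standard result and refers to \cite[Proposition I.1.2]{Alfsen} and \cite[Proposition 2.1.2]{Schirotzek}. Your argument via Hahn--Banach separation of the epigraph is precisely the standard proof one finds in those references, so there is no meaningful difference in approach to discuss.
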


This proposition is a standard result which can directly be proven without
using the fact that the $\Gamma $--regularization $\Gamma _{K}\left(
h\right) $ of a functional $h$ on $K$ equals its twofold \emph{%
Legendre--Fenchel transform} -- 
\index{Legendre--Fenchel transform!biconjugate}%
\index{Legendre--Fenchel transform!Gamma--regularization}%
\index{Gamma--regularization!Legendre--Fenchel transform}also called the 
\emph{biconjugate }(functional) of $h$. Indeed, $\Gamma _{K}\left( h\right) $
is the largest lower semi--continuous and convex minorant of $h$:

\begin{corollary}[Largest lower semi--continuous convex minorant of $h$]
\label{Biconjugate}Let $h$ be any functional from a (non--empty) closed
convex subset $K\subseteq \mathcal{X}$ of a locally convex space $\mathcal{X}
$ to $\left( -\infty ,\infty \right] $. Then its $\Gamma $--regularization $%
\Gamma _{K}\left( h\right) $ is its largest lower semi--continuous and
convex minorant on $K$.
\end{corollary}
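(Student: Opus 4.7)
The plan is to establish the corollary in two essentially independent steps: first, verify that $\Gamma_K(h)$ is itself a lower semi--continuous convex minorant of $h$ on $K$; second, show that any competing lower semi--continuous convex minorant of $h$ on $K$ is dominated by $\Gamma_K(h)$. The second assertion is where Proposition \ref{lemma gamma regularisation} does the real work, so the key structural observation is the monotonicity of the $\Gamma$--regularization operation with respect to the partial order of functionals.

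First I would observe that, directly from Definition \ref{gamm regularisation}, the functional $\Gamma_K(h)$ is the pointwise supremum over $x \in \mathcal{X}$ of a family of real valued continuous affine functionals $m \in \mathrm{A}(\mathcal{X})$. Any supremum of lower semi--continuous functionals is lower semi--continuous, and any supremum of convex functionals is convex; hence $\Gamma_K(h)$ is lower semi--continuous and convex on $\mathcal{X}$, and in particular on $K$. Moreover, again by definition, each admissible $m$ in the supremum satisfies $m(x) \leq h(x)$ for $x \in K$, so that $\Gamma_K(h)(x) \leq h(x)$ for all $x \in K$. Thus $\Gamma_K(h)$ is a lower semi--continuous convex minorant of $h$ on $K$.

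For maximality, I would consider an arbitrary lower semi--continuous convex functional $g : K \to (-\infty,\infty]$ satisfying $g \leq h$ on $K$. Extending $g$ to $\mathcal{X}$ via Definition \ref{extension of functional}, the extension remains convex and lower semi--continuous because $K$ is closed and convex. Applying Proposition \ref{lemma gamma regularisation} to $g$ on $K$, one obtains $g = \Gamma_K(g)$ on $K$. Now any affine continuous $m \in \mathrm{A}(\mathcal{X})$ with $m|_K \leq g|_K$ automatically satisfies $m|_K \leq h|_K$ since $g \leq h$ on $K$; this inclusion of admissible families in the supremum defining the $\Gamma$--regularization yields the monotonicity $\Gamma_K(g) \leq \Gamma_K(h)$ pointwise on $\mathcal{X}$. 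Combining these two facts gives $g = \Gamma_K(g) \leq \Gamma_K(h)$ on $K$, which is the desired maximality statement.

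The argument is essentially bookkeeping once Proposition \ref{lemma gamma regularisation} is in hand, and I do not anticipate a significant obstacle. The only subtle point is to make sure that the extension of $g$ from $K$ to $\mathcal{X}$ by the value $+\infty$ on $\mathcal{X} \setminus K$ preserves convexity and lower semi--continuity; this is where the hypotheses that $K$ be closed and convex are used, and it is standard. With that in place, the monotonicity of $\Gamma_K$ together with the self--reproducing property $g = \Gamma_K(g)$ for lower semi--continuous convex $g$ closes the argument without further calculation.
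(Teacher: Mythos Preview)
Your proof is correct and follows essentially the same approach as the paper: apply Proposition \ref{lemma gamma regularisation} to the competing minorant $g$ to get $g = \Gamma_K(g)$, then use the monotonicity $\Gamma_K(g) \leq \Gamma_K(h)$ coming from the inclusion of admissible affine minorants. The paper's version is more terse and omits your first step (that $\Gamma_K(h)$ is itself a lower semi--continuous convex minorant), but only because this is remarked in the text immediately preceding the corollary.
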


\begin{proof}%
For any lower semi--continuous convex functional $f$ satisfying $f\leq h$ on 
$K$, we have, by Proposition \ref{lemma gamma regularisation}, that 
\begin{equation*}
f\left( x\right) =\sup \left\{ m(x):m\in \mathrm{A}\left( \mathcal{X}\right)
\;%
\text{and }m|_{K}\leq f|_{K}\leq h|_{K}\right\} \leq \Gamma _{K}\left(
h\right) \left( x\right)
\end{equation*}%
for any $x\in K$. 
\end{proof}%

In particular, if $(\mathcal{X},\mathcal{X}^{\ast })$ is a dual pair and $h$
is any functional from $\mathcal{X}$ to $(-\infty ,\infty ]$ then $\Gamma _{%
\mathcal{X}}\left( h\right) =h^{\ast \ast }$, by using Theorem \ref{theorem
fenchel moreau} together with Corollary \ref{Biconjugate}. See Corollary \ref%
{theorem fenchel moreaubis}.

Proposition \ref{lemma gamma regularisation} has further interesting
consequences. The first one we would like to mention is an extension of the
Bauer maximum principle \cite[Lemma 4.1.12]{BrattelliRobinsonI} (or \cite[%
Theorem I.5.3.]{Alfsen}), that is:

\begin{lemma}[Bauer maximum principle]
\label{Bauer maximum principle}\mbox{ }\newline
\index{Bauer maximum principle|textbf}Let $\mathcal{X}$ be a topological
vector space. An upper semi--continuous convex real functional $h$ over a
(non-empty)\ compact convex subset $K\subseteq \mathcal{X}$ attains its
maximum at an extreme point of $K$, i.e., 
\begin{equation*}
\underset{x\in K}{\sup }h\left( x\right) =\underset{%
\hat{x}\in \mathcal{E}(K)}{\max }h\left( \hat{x}\right) .
\end{equation*}%
Here, $\mathcal{E}(K)$ is the (non--empty) set of extreme points of $K$, cf.
Theorem \ref{theorem Krein--Millman}.
\end{lemma}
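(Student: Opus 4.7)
\noindent\textbf{Proof proposal for Lemma \ref{Bauer maximum principle}.} The plan is the classical one: reduce to the Krein--Milman theorem (Theorem \ref{theorem Krein--Millman}) applied not to $K$ itself but to the face of maximizers of $h$. First, since $h$ is upper semi--continuous on the compact set $K$, the supremum $M:=\sup_{x\in K} h(x)$ is attained and finite, and the set
\begin{equation*}
F:=\{x\in K\,:\,h(x)=M\}=\{x\in K\,:\,h(x)\geq M\}
\end{equation*}
is a non--empty closed (hence compact) subset of $K$. I would then argue that $F$ is convex: if $x_1,x_2\in F$ and $\lambda\in[0,1]$, convexity of $h$ yields $h(\lambda x_1+(1-\lambda)x_2)\leq \lambda M+(1-\lambda)M=M$, while by maximality the reverse inequality holds, so $\lambda x_1+(1-\lambda)x_2\in F$.

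The next step is to check that $F$ is in fact a \emph{face} of $K$, which is where convexity of $h$ is used in its essential form. Suppose $x=\lambda x_1+(1-\lambda)x_2\in F$ with $x_1,x_2\in K$ and $\lambda\in(0,1)$. Then
\begin{equation*}
M=h(x)\leq \lambda h(x_1)+(1-\lambda)h(x_2)\leq \lambda M+(1-\lambda)M=M,
\end{equation*}
which forces $h(x_1)=h(x_2)=M$, i.e.\ $x_1,x_2\in F$. Thus $F$ is a non--empty compact convex face of $K$.

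Now I would apply the Krein--Milman theorem (Theorem \ref{theorem Krein--Millman}(i)) to the compact convex set $F$ to produce an extreme point $\hat{x}\in\mathcal{E}(F)$, and finally verify that extremality is preserved when passing from $F$ to $K$: if $\hat{x}=\lambda y_1+(1-\lambda)y_2$ with $y_1,y_2\in K$ and $\lambda\in(0,1)$, the face property of $F$ forces $y_1,y_2\in F$, and then extremality of $\hat{x}$ in $F$ forces $y_1=y_2=\hat{x}$, so $\hat{x}\in\mathcal{E}(K)$. Since $\hat{x}\in F$, we have $h(\hat{x})=M$, proving the claim.

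The only subtle point, which I expect to be the main obstacle to state cleanly, is that Theorem \ref{theorem Krein--Millman} is formulated for compact convex subsets of a \emph{locally convex} space, whereas the present lemma assumes only a topological vector space. As noted in the remark following Theorem \ref{theorem Krein--Millman}, the genuine requirement is that the continuous dual separate points on the compact convex face $F$; this holds in all cases of interest here (in particular in the weak$^{\ast}$--setting used throughout the paper, where $\mathcal{X}$ is actually locally convex by Corollary \ref{thm locally convex spacebis}), and in any event one may restrict $\mathcal{X}$ to the closed affine hull of $F$ to invoke Krein--Milman in the form stated.
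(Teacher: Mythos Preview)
Your proof is the standard classical argument and is correct. Note, however, that the paper does not give its own proof of this lemma: it is stated as a known result with references to \cite[Lemma 4.1.12]{BrattelliRobinsonI} and \cite[Theorem I.5.3.]{Alfsen}, and your argument (maximizer set is a compact convex face, apply Krein--Milman to that face, extreme points of a face are extreme in $K$) is precisely the proof one finds in those sources.
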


\noindent Indeed, by combining Proposition \ref{lemma gamma regularisation}
with Lemma \ref{Bauer maximum principle} it is straightforward to check the
following statement which does not seem to have been observed before:

\begin{lemma}[Extension of the Bauer maximum principle]
\label{Bauer maximum principle bis}\mbox{ }\newline
\index{Bauer maximum principle!Extension|textbf}Let $h_{\pm }$ be two convex
real functionals from a locally convex space $\mathcal{X}$ to $\left(
-\infty ,\infty \right] $ such that $h_{-}$ and $h_{+}$ are, respectively,
lower and upper semi--continuous. Then the supremum of the sum $%
h:=h_{-}+h_{+}$ over a (non-empty) compact convex subset $K\subseteq 
\mathcal{X}$ can be reduced to the (non--empty) set $\mathcal{E}(K)$ of
extreme points of $K$, i.e., 
\begin{equation*}
\underset{x\in K}{\sup }h\left( x\right) =\underset{%
\hat{x}\in \mathcal{E}(K)}{\sup }h\left( \hat{x}\right) .
\end{equation*}
\end{lemma}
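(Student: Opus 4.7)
The strategy is to reduce this to the classical Bauer maximum principle (Lemma \ref{Bauer maximum principle}) by exploiting the fact that a lower semi--continuous convex functional is, by Proposition \ref{lemma gamma regularisation}, the pointwise supremum of its continuous affine minorants. More precisely, I would first apply Proposition \ref{lemma gamma regularisation} to $h_{-}$, so that for every $x\in K$,
\begin{equation*}
h_{-}(x)=\Gamma _{K}(h_{-})(x)=\sup \left\{ m(x):m\in \mathrm{A}(\mathcal{X}
),\;m|_{K}\leq h_{-}|_{K}\right\} .
\end{equation*}
Adding $h_{+}(x)$ on both sides and then taking the supremum over $x\in K$ yields
\begin{equation*}
\sup\limits_{x\in K}h(x)=\sup\limits_{x\in K}\sup\limits_{m}\left\{
(m+h_{+})(x)\right\} =\sup\limits_{m}\sup\limits_{x\in K}(m+h_{+})(x),
\end{equation*}
where the exchange of suprema is always legitimate.

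The second step is then the key point: for each fixed admissible affine continuous minorant $m$, the functional $m+h_{+}$ is convex (as the sum of an affine map and a convex one) and upper semi--continuous (as the sum of a continuous map and an upper semi--continuous one). It therefore falls within the scope of the classical Bauer maximum principle (Lemma \ref{Bauer maximum principle}), giving
\begin{equation*}
\sup\limits_{x\in K}(m+h_{+})(x)=\max\limits_{\hat{x}\in \mathcal{E}
(K)}(m+h_{+})(\hat{x}).
\end{equation*}
Substituting this back and exchanging the supremum over $m$ with the supremum over $\hat{x}\in \mathcal{E}(K)$ (which is again always valid), one obtains
\begin{equation*}
\sup\limits_{x\in K}h(x)=\sup\limits_{\hat{x}\in \mathcal{E}
(K)}\sup\limits_{m}(m+h_{+})(\hat{x})=\sup\limits_{\hat{x}\in \mathcal{E}
(K)}(h_{-}(\hat{x})+h_{+}(\hat{x}))=\sup\limits_{\hat{x}\in \mathcal{E}
(K)}h(\hat{x}),
\end{equation*}
by applying Proposition \ref{lemma gamma regularisation} to $h_{-}$ a second time, evaluated at each extreme point $\hat{x}$. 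The opposite inequality $\sup_{K}h\geq \sup_{\mathcal{E}(K)}h$ is trivial since $\mathcal{E}(K)\subseteq K$.

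I expect the main delicate point to lie not in any heavy analysis but rather in handling the extended--real--valued nature of $h_{-}$ and $h_{+}$: one must check that the manipulations above make sense even when $h_{\pm }$ may attain the value $+\infty $. Since $h_{+}$ is upper semi--continuous on the compact set $K$, it is bounded above there, so $\sup_{K}h_{+}<\infty $; if $h_{-}$ is identically $+\infty $ on $K$ the statement is vacuous, while otherwise Proposition \ref{lemma gamma regularisation} still provides the affine--minorant representation of $h_{-}$ on its effective domain, and the exchanges of suprema are unconditional. The sole real ingredient is therefore the combination of Proposition \ref{lemma gamma regularisation} with the classical Bauer principle, which is why the result was arguably within reach yet, as the authors note, had not previously been observed.
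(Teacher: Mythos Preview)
Your proof is correct and follows essentially the same route as the paper's: use Proposition \ref{lemma gamma regularisation} to write $h_{-}=\Gamma_{K}(h_{-})$ as a supremum of affine continuous minorants, commute the two suprema, apply the classical Bauer maximum principle to each $m+h_{+}$, and commute back. Your additional remarks on the extended--real--valued case are a welcome clarification but not a departure from the paper's argument.
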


\begin{proof}%
We first use Proposition \ref{lemma gamma regularisation} in order to write $%
h_{-}=\Gamma _{K}\left( h_{-}\right) $ as a supremum over affine and
continuous functionals. Then we commute this supremum with the one over $K$
and apply the Bauer maximum principle to obtain that%
\begin{equation*}
\underset{x\in K}{\sup }h\left( x\right) =\sup \left\{ \underset{\hat{x}\in 
\mathcal{E}(K)}{\sup }\left\{ m(\hat{x})+h_{+}\left( \hat{x}\right) \right\}
:m\in \mathrm{A}\left( \mathcal{X}\right) \;\text{and }m|_{K}\leq
h_{-}|_{K}\right\} .
\end{equation*}%
The lemma follows by commuting once again both suprema and by using $%
h_{-}=\Gamma _{K}\left( h_{-}\right) $. 
\end{proof}%

Observe, however, that, under the conditions of the lemma above the supremum
of $h=h_{-}+h_{+}$ is, in general, not attained on $\mathcal{E}(K)$.

Another consequence of Proposition \ref{lemma gamma regularisation} is
Jensen's inequality for convex lower semi-continuous real functionals on a
compact convex sets $K$.

\begin{lemma}[Jensen's inequality on compact convex sets]
\label{Jensen inequality}\mbox{ }\newline
\index{Jensen's inequality|textbf}Let $\mathcal{X}$ be a locally convex
space, $h$ be any lower semi--continuous convex real functional over a
(non-empty) compact convex subset $K\subseteq \mathcal{X}$ and $\mu _{x}\in
M_{1}^{+}(K)$ be any probability measure with barycenter $x\in K$
(Definition \ref{def barycenter}). Assume the existence of some positive and 
$\mu _{x}$--integrable upper bound $\mathfrak{h}$ for $h$, i.e., some
measurable functional $\mathfrak{h}$ from $K$ to $\mathbb{R}_{0}^{+}$
satisfying 
\begin{equation*}
\int_{K}\mathrm{d}\mu _{x}(%
\hat{x})\;\mathfrak{h}(\hat{x})<\infty \quad \text{and}\quad h\leq \mathfrak{%
h}\quad \mu _{x}\text{--a.e. on }K\mathrm{.}
\end{equation*}%
Then 
\begin{equation*}
h\left( x\right) \leq \int_{K}\mathrm{d}\mu _{x}(\hat{x})\;h(\hat{x}).
\end{equation*}
\end{lemma}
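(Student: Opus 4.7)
The plan is to reduce the inequality to the affine case via the $\Gamma$-regularization and then invoke the barycenter property of $\mu_x$ on affine continuous functionals.

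First, I would apply Proposition \ref{lemma gamma regularisation}: since $h$ is by assumption a lower semi-continuous convex real functional on the closed convex subset $K$, we have $h = \Gamma_K(h)$ on $K$. Therefore, for the point $x \in K$,
\begin{equation*}
h(x) = \Gamma_K(h)(x) = \sup \left\{ m(x) : m \in \mathrm{A}(\mathcal{X}) \text{ and } m|_K \leq h|_K \right\}.
\end{equation*}
Second, for any affine continuous minorant $m \in \mathrm{A}(\mathcal{X})$ of $h|_K$, write $m = \ell + c$ with $\ell$ continuous and linear and $c \in \mathbb{R}$; then Definition \ref{def barycenter} applied to $\ell$, combined with $\mu_x(K)=1$, yields $m(x) = \int_K m(\hat{x})\, \mathrm{d}\mu_x(\hat{x})$ (this is also a direct instance of Lemma \ref{Corollary 4.1.18.} since $m$ is affine and continuous, hence affine upper semi-continuous).

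Third, I would check that the right-hand side $\int_K h\, \mathrm{d}\mu_x$ is well-defined as an element of $\mathbb{R} \cup \{+\infty\}$ so that the monotonicity step makes sense. Since $h$ is lower semi-continuous on the compact set $K$, it attains its infimum and is in particular bounded below by some constant $c_0 \in \mathbb{R}$. By hypothesis there is a positive $\mu_x$-integrable upper bound $\mathfrak{h}$ with $h \leq \mathfrak{h}$ $\mu_x$-almost everywhere, so $h$ is $\mu_x$-integrable. For any minorant $m$ as above, the pointwise inequality $m \leq h$ on $K$ and $\mu_x$-integrability of both $m$ (it is continuous on the compact $K$) and $h$ give
\begin{equation*}
m(x) = \int_K m(\hat{x})\, \mathrm{d}\mu_x(\hat{x}) \leq \int_K h(\hat{x})\, \mathrm{d}\mu_x(\hat{x}).
\end{equation*}
Taking the supremum over all such $m$ and using the first step,
\begin{equation*}
h(x) = \sup_m m(x) \leq \int_K h(\hat{x})\, \mathrm{d}\mu_x(\hat{x}),
\end{equation*}
which is the desired inequality.

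The only genuinely delicate point is the well-definedness and integrability issue in the third step: without the positive integrable dominant $\mathfrak{h}$, the right-hand side could fail to be meaningfully bounded, and one must combine this with the lower bound from lower semi-continuity on the compact set $K$. Everything else is a direct application of Proposition \ref{lemma gamma regularisation} together with the defining property of the barycenter, so this measurability/integrability bookkeeping is the main obstacle rather than a conceptual difficulty.
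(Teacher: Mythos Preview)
Your proof is correct and follows essentially the same route as the paper's: both use Proposition \ref{lemma gamma regularisation} to write $h(x)$ as a supremum over affine continuous minorants, apply the barycenter property (via Lemma \ref{Corollary 4.1.18.}) to each such minorant, and then pass to the supremum using monotonicity of the integral together with the integrable dominant $\mathfrak{h}$. Your additional remark that lower semi-continuity on the compact $K$ yields a finite lower bound for $h$ is a minor but welcome clarification of the integrability bookkeeping.
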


\noindent Jensen's inequality is of course a well--known result stated in
various situations including functionals taking value in a topological
vector space. A simple proof of this lemma using Proposition \ref{lemma
gamma regularisation} is given by \cite[Proposition I.2.2.]{Alfsen}. We give
it for completeness as it is rather short.

\begin{proof}
As $h$ is convex and lower semi--continuous, by Proposition \ref{lemma gamma
regularisation}, 
\begin{equation*}
h(x)=\sup \left\{ m(x):m\in \mathrm{A}\left( \mathcal{X}\right) \;\text{and }%
m|_{K}\leq h|_{K}\right\}
\end{equation*}%
for any $x\in K$. We further observe that, for any affine continuous real
functional $m$ and any probability measure $\mu _{x}$ with barycenter $x\in
K $, 
\begin{equation*}
m(x)=\int_{K}\mathrm{d}\mu _{x}(\hat{x})\,m(\hat{x}),
\end{equation*}%
see Lemma \ref{Corollary 4.1.18.}. Thus 
\begin{equation}
h(x)=\sup \left\{ \int_{K}\mathrm{d}\mu _{x}(\hat{x})\,m(\hat{x}):m\in 
\mathrm{A}\left( \mathcal{X}\right) \;\text{and }m|_{K}\leq h|_{K}\right\} .
\label{jensen1}
\end{equation}%
Since there is a positive and $\mu _{x}$--integrable upper bound $\mathfrak{h%
}$ for $h$, we have that 
\begin{equation*}
\int_{K}\mathrm{d}\mu _{x}(\hat{x})\,\max \left\{ h(\hat{x}),0\right\}
<\infty .
\end{equation*}%
Hence, by (\ref{jensen1}) together with the monotonicity of integrals, 
\begin{equation*}
h(x)\leq \int_{K}\mathrm{d}\mu _{x}(\hat{x})\,h(\hat{x})<\infty .
\end{equation*}%
\end{proof}%

We give now an interesting property concerning the $\Gamma $--regularization
of real functionals in relation with compact convex sets (cf. \cite[%
Corollary I.3.6.]{Alfsen}):

\begin{theorem}[$\Gamma $--regularization of continuous maps]
\label{Thm - Corollary I.3.6}\mbox{ }\newline
Let $K\subseteq \mathcal{X}$ be any (non--empty) compact convex subset of a
locally convex space $\mathcal{X}$ and $h:K\rightarrow \left( -\infty
,\infty \right] $ be a continuous real functional. Then, for any $x\in K$,
there is a probability measure $\mu _{x}\in M_{1}^{+}(K)$\ on $K$ with
barycenter $x$ such that 
\begin{equation*}
\Gamma _{K}\left( h\right) \left( x\right) =\int_{K}\mathrm{d}\mu _{x}(\hat{x%
})\;h\left( \hat{x}\right) .
\end{equation*}
\end{theorem}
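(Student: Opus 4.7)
The strategy is to identify $\Gamma_K(h)$ with the functional
\[
F(x):=\inf\left\{\int_K h\,\mathrm{d}\mu : \mu\in M_1^+(K),\ \mu\text{ has barycenter }x\right\},
\]
and to show that the infimum defining $F$ is always attained. First I would verify that the set $M_1^+(K)_x$ of probability measures on $K$ with barycenter $x$ is non-empty (it contains the Dirac mass $\delta_x$) and weak$^\ast$--compact: by the Riesz--Markov theorem, $M_1^+(K)$ is weak$^\ast$--compact (Banach--Alaoglu) and, by Theorem \ref{thm barycenter} (i), the map $\mu\mapsto x_\mu$ is weak$^\ast$--continuous, so $M_1^+(K)_x$ is a weak$^\ast$--closed subset of a weak$^\ast$--compact set. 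Since $h$ is continuous on the compact set $K$, the functional $\mu\mapsto \int_K h\,\mathrm{d}\mu$ is weak$^\ast$--continuous on $M_1^+(K)$. Hence the infimum defining $F(x)$ is attained by some $\mu_x\in M_1^+(K)_x$.

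Next I would show that $F$ is convex and weak$^\ast$--lower semi-continuous on $K$. Convexity follows from the fact that $M_1^+(K)_x$ behaves affinely under convex combinations of points in $K$: given $x=\lambda x_1+(1-\lambda)x_2$ and $\varepsilon$--optimal measures $\mu_i$ for $x_i$, the measure $\lambda\mu_1+(1-\lambda)\mu_2$ has barycenter $x$ and gives $F(x)\le \lambda F(x_1)+(1-\lambda)F(x_2)+\varepsilon$. Lower semi-continuity follows from a standard compactness argument: for $x_n\to x$, pick optimizers $\mu_n$ for $F(x_n)$, extract a weak$^\ast$--convergent subnet with limit $\mu$; then $\mu$ has barycenter $x$ by continuity of the barycenter map, and $\int h\,\mathrm{d}\mu_n\to \int h\,\mathrm{d}\mu \ge F(x)$ by continuity of $h$.

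Finally I would sandwich $F$ between $\Gamma_K(h)$ and itself. On one side, since $\delta_x\in M_1^+(K)_x$, we have $F\le h$ on $K$; combined with convexity and lower semi-continuity of $F$, Corollary \ref{Biconjugate} (largest l.s.c.\ convex minorant) yields $F\le \Gamma_K(h)$. On the other side, $\Gamma_K(h)$ is itself convex and lower semi-continuous, and bounded above by the constant $\sup_K h<\infty$ (recall $h$ is continuous on the compact $K$), so Jensen's inequality (Lemma \ref{Jensen inequality}) applies and gives, for any $\mu\in M_1^+(K)_x$,
\[
\Gamma_K(h)(x)\le \int_K \Gamma_K(h)\,\mathrm{d}\mu \le \int_K h\,\mathrm{d}\mu.
\]
Taking the infimum over $\mu\in M_1^+(K)_x$ shows $\Gamma_K(h)(x)\le F(x)$, so equality holds; plugging in the optimizer $\mu_x$ produced in the first step then yields the desired representation. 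The main obstacle is the weak$^\ast$--lower semi-continuity of $F$ and, more generally, guaranteeing the existence of an optimizer $\mu_x$; both issues hinge on the weak$^\ast$--compactness of $M_1^+(K)_x$ and the continuity hypothesis on $h$, the latter being essential to ensure $\mu\mapsto \int h\,\mathrm{d}\mu$ is continuous (and not merely lower semi-continuous) in the weak$^\ast$--topology.
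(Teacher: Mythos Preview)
The paper does not supply its own proof of this statement; it simply cites \cite[Corollary I.3.6]{Alfsen}. Your argument is correct and is essentially the classical proof found there: Alfsen's lower envelope $\check{h}(x):=\inf\{\int h\,\mathrm{d}\mu:\mu\in M_1^+(K),\ x_\mu=x\}$ is exactly your $F$, and the identification $\check{h}=\Gamma_K(h)$ for continuous $h$ proceeds via the same two inequalities (largest convex l.s.c.\ minorant on one side, Jensen on the other), with attainment of the infimum coming from weak$^\ast$--compactness of the fiber $M_1^+(K)_x$.

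One small point of care: $K$ need not be metrizable, so your lower semi--continuity step should be phrased with nets rather than sequences throughout (you already invoke subnets when extracting a limit measure, so this is only a notational adjustment). Also, strictly speaking the l.s.c.\ of $F$ is only needed so that Corollary~\ref{Biconjugate} applies; an alternative, slightly shorter route is to observe directly that any affine continuous minorant $m$ of $h$ satisfies $m(x)=\int m\,\mathrm{d}\mu\le\int h\,\mathrm{d}\mu$ for every $\mu$ with barycenter $x$, whence $\Gamma_K(h)\le F$, and then combine with the Jensen inequality $\Gamma_K(h)\ge F$ you already have. This avoids checking l.s.c.\ and convexity of $F$ altogether.
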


\noindent This theorem is a useful result to study variational problems --
at least the ones appearing in this monograph. Indeed, if $h$ is a
continuous functional from a compact convex set $K$ to $[\mathrm{k},\infty ]$
with $\mathrm{k}\in \mathbb{R}$ then extreme points of the compact set of
minimizers of $\Gamma _{K}\left( h\right) $ on $K$ are minimizers of $h$.
This can be seen -- in a more general setting -- as follows.

Let $K$ be a compact convex subset of a locally convex space $\mathcal{X}$
and $h:K\rightarrow \left( -\infty ,\infty \right] $ be any real functional.
Then $\{x_{i}\}_{i\in I}\subseteq K$ is -- by definition -- a net of \emph{%
approximating minimizers}%
\index{Minimizers!approximating|textbf} when 
\begin{equation*}
\underset{I}{\lim }\ h(x_{i})=\inf\limits_{x\in K}\,h(x).
\end{equation*}%
Note that nets $\{x_{i}\}_{i\in I}\subseteq K$ converges along a subnet as $%
K $ is compact. Then we define the set of generalized minimizers of $h$ as
follows:

\begin{definition}[Set of generalized minimizers]
\label{gamm regularisation copy(4)}\mbox{ }\newline
\index{Minimizers!generalized|textbf}Let $K$ be a (non--empty) compact
convex subset of a locally convex space $\mathcal{X}$ and $h:K\rightarrow
\left( -\infty ,\infty \right] $ be any real functional. Then the set $%
\mathit{\Omega }\left( h,K\right) \subseteq K$ of generalized minimizers of $%
h$ is the (non--empty) set 
\begin{equation*}
\mathit{\Omega }\left( h,K\right) :=%
\Big\{%
y\in K:\exists \{x_{i}\}_{i\in I}\subseteq K\mathrm{\ converging\ to}\ y%
\mathrm{\ with\ }\underset{I}{\lim }\ h(x_{i})=\inf\limits_{K}\,h%
\Big\}%
\end{equation*}%
of all limit points of approximating minimizers of $h$.
\end{definition}

\noindent Note that the non--empty set $\mathit{\Omega }\left( h,K\right) $
is compact when $K$ is metrizable:

\begin{lemma}[Properties of the set $\mathit{\Omega }\left( h,K\right) $]
\label{theorem trivial sympa 0}\mbox{ }\newline
Let $K$ be a compact, convex, and metrizable subset of a locally convex
space $\mathcal{X}$ and $h:K\rightarrow \left( -\infty ,\infty \right] $ be
any real functional. Then the set $\mathit{\Omega }\left( h,K\right) $ of
generalized minimizers of $h$ over $K$ is compact.
\end{lemma}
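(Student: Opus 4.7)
The plan is to show that $\mathit{\Omega}\left( h,K\right) $ is closed in $K$; since $K$ is compact, this yields compactness of $\mathit{\Omega}\left( h,K\right) $. Because $K$ is metrizable, I can work with sequences rather than with general nets, which considerably simplifies the argument. If $\inf\limits_{x\in K}\,h(x)=+\infty $ then every point of $K$ is trivially an accumulation point of approximating minimizers, so $\mathit{\Omega}\left( h,K\right) =K$, and the claim is immediate. Hence I will assume from now on that $\inf\limits_{x\in K}\,h(x)\in \lbrack -\infty ,\infty )$.

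Fix a metric $d$ generating the topology of $K$, and take a sequence $\{y_{n}\}_{n\in \mathbb{N}}\subseteq \mathit{\Omega}\left( h,K\right) $ converging (in $K$) to some $y\in K$. I want to produce a single sequence in $K$ converging to $y$ along which $h$ tends to the infimum, which will show $y\in \mathit{\Omega}\left( h,K\right) $ by Definition \ref{gamm regularisation copy(4)}. The tool for this is a standard diagonal extraction: by definition of $\mathit{\Omega}\left( h,K\right) $, for each $n$ there is a sequence $\{x_{n,k}\}_{k\in \mathbb{N}}\subseteq K$ with $x_{n,k}\to y_{n}$ as $k\to \infty $ and $\lim_{k}h(x_{n,k})=\inf\limits_{x\in K}\,h(x)$. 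For each $n$ pick $k_{n}\in \mathbb{N}$ large enough so that simultaneously $d(x_{n,k_{n}},y_{n})<1/n$ and $|h(x_{n,k_{n}})-\inf\limits_{x\in K}\,h(x)|<1/n$ (or, if $\inf\limits_{x\in K}\,h(x)=-\infty $, so that $h(x_{n,k_{n}})<-n$).

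Setting $z_{n}:=x_{n,k_{n}}\in K$, the triangle inequality gives
\begin{equation*}
d(z_{n},y)\leq d(x_{n,k_{n}},y_{n})+d(y_{n},y)\leq \frac{1}{n}+d(y_{n},y)\underset{n\to \infty }{\longrightarrow }0,
\end{equation*}
so $z_{n}\to y$ in $K$, while by construction $\lim_{n}h(z_{n})=\inf\limits_{x\in K}\,h(x)$. Therefore $y\in \mathit{\Omega}\left( h,K\right) $, so $\mathit{\Omega}\left( h,K\right) $ is a closed subset of the compact set $K$, hence compact.

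There is no real obstacle here: the only subtlety is choosing $k_{n}$ so that both approximations hold at once, which is where metrizability is used essentially — without it one would have to argue at the level of nets and directed indexing sets, where a clean diagonal extraction is less transparent (though still possible). Note also that no regularity assumption on $h$ (semicontinuity, affinity, etc.) is needed, which is consistent with the generality of the statement.
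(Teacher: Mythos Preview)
Your proof is correct and follows essentially the same approach as the paper: show $\mathit{\Omega}(h,K)$ is closed in the compact set $K$ by using metrizability to reduce to sequences and then running a diagonal extraction on doubly-indexed approximating minimizers. Your treatment is in fact slightly more careful than the paper's, since you explicitly handle the degenerate cases $\inf_{K} h = \pm\infty$, whereas the paper's bound $|h(x_{n,m})-\inf_K h|\leq 2^{-n}$ tacitly assumes the infimum is finite.
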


\begin{proof}%
Since $K$ is compact, $\mathit{\Omega }\left( h,K\right) \subseteq K$ is
compact if it is a closed set. Because it is metrizable, $K$ is sequentially
compact and we can restrict ourself on sequences instead of more general
nets. Then the lemma can easily be proven by using any metric $d_{K}(x,y)$
on $K$ generating the topology. Indeed, for any sequence $%
\{y_{n}\}_{n=1}^{\infty }\subseteq \mathit{\Omega }\left( h,K\right) $ of
generalized minimizers converging to $y$, there is, by Definition \ref{gamm
regularisation copy(4)}, a sequence $\{x_{n,m}\}_{n,m=1}^{\infty }\subseteq
K $ of approximating minimizers converging, for any $n\in \mathbb{N}$, to $%
y_{n}\in \mathit{\Omega }\left( h,K\right) $ as $m\rightarrow \infty $. In
particular, for all $n\in \mathbb{N}$, there exists $N_{n}>0$ such that, for
all $m>N_{n}$,%
\begin{equation*}
d_{K}(x_{n,m},y)\leq 2^{-n}+d_{K}(y_{n},y)\mathrm{\quad and\quad }%
|h(x_{n,m})-\inf\limits_{K}\,h|\leq 2^{-n}.
\end{equation*}%
By taking any function $p(n)\in \mathbb{N}$ satisfying $p(n)>N_{n}$ and
converging to $\infty $ as $n\rightarrow \infty $ we obtain that $%
\{x_{n,p(n)}\}_{n=1}^{\infty }$ is a sequence of approximating minimizers
converging to $y$ as $n\rightarrow \infty $. In other words, $y\in \mathit{%
\Omega }\left( h,K\right) $. 
\end{proof}%

Now, we are in position to give a useful theorem on the minimization of real
functionals:

\begin{theorem}[Minimization of real functionals -- I]
\label{theorem trivial sympa 1}\mbox{ }\newline
\index{Minimizers}%
\index{Minimization of real functionals|textbf}Let $K$ be any (non--empty)
compact convex subset of a locally convex space $\mathcal{X}$ and $%
h:K\rightarrow \lbrack \mathrm{k},\infty ]$ be any real functional with $%
\mathrm{k}\in \mathbb{R}$. Then we have that:\newline
\emph{(i)} 
\begin{equation*}
\inf \,h\left( K\right) =\inf \,\Gamma _{K}\left( h\right) \left( K\right) .
\end{equation*}%
\emph{(ii) }The set $\mathit{M}$ of minimizers of$\ \Gamma _{K}\left(
h\right) $ over $K$ equals the closed convex hull of the set $\mathit{\Omega 
}\left( h,K\right) $ of generalized minimizers of $h$ over $K$, i.e., 
\begin{equation*}
\mathit{M}=%
\overline{\mathrm{co}\left( \mathit{\Omega }\left( h,K\right) \right) }.
\end{equation*}
\end{theorem}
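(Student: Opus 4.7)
\textbf{Proof proposal for Theorem \ref{theorem trivial sympa 1}.} For part (i), the inequality $\inf\Gamma_K(h)(K)\leq \inf h(K)$ follows from $\Gamma_K(h)\leq h$ on $K$. For the reverse, since $h\geq \mathrm{k}>-\infty$, the value $m_{0}:=\inf h(K)$ is finite, and the constant map $x\mapsto m_{0}$ is an admissible affine continuous minorant in Definition \ref{gamm regularisation}, so $\Gamma_K(h)\geq m_{0}$ pointwise on $K$. For part (ii), I first note that $\mathit{M}$ is convex and closed, as the set of minimizers of the lower semi--continuous convex functional $\Gamma_K(h)$ (cf. Corollary \ref{Biconjugate}). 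The inclusion $\mathit{\Omega}(h,K)\subseteq \mathit{M}$ is easy: given $y\in \mathit{\Omega}(h,K)$ and an approximating net $x_{i}\to y$ with $h(x_{i})\to \inf h(K)$, the bound $\Gamma_K(h)\leq h$ combined with the lower semi--continuity of $\Gamma_K(h)$ yields $\Gamma_K(h)(y)\leq \liminf \Gamma_K(h)(x_{i})\leq \inf h(K)$, which equals $\inf\Gamma_K(h)(K)$ by (i). Hence $\overline{\mathrm{co}(\mathit{\Omega}(h,K))}\subseteq \mathit{M}$.

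The heart of the proof is the converse inclusion $\mathit{M}\subseteq \overline{\mathrm{co}(\mathit{\Omega}(h,K))}$, which I plan to establish by contradiction via Hahn--Banach separation. Suppose there exists $y\in \mathit{M}$ with $y\notin C:=\overline{\mathrm{co}(\mathit{\Omega}(h,K))}$. Since $C$ is compact and convex and $\mathcal{X}$ is locally convex, one can separate $y$ from $C$ by a continuous linear functional $\ell$ on $\mathcal{X}$ together with reals $\alpha<\beta$ such that $\ell(y)<\alpha$ and $\ell(z)\geq \beta$ for every $z\in C$. The open set $U:=\{x\in \mathcal{X}:\ell(x)>\alpha\}$ is then an open neighborhood of $\mathit{\Omega}(h,K)$, and $\inf_{K}\ell\leq \ell(y)<\alpha$.

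The technical core of the argument is the following uniform lower bound: \emph{there exists $\delta>0$ such that $h(x)\geq \inf h(K)+\delta$ for every $x\in K\setminus U$}. If no such $\delta$ existed, one would obtain a sequence $\{x_{n}\}\subseteq K\setminus U$ with $h(x_{n})\to \inf h(K)$; by compactness of the closed set $K\setminus U$, a subnet of $\{x_{n}\}$ would converge to some $z\in K\setminus U$, but $z$ would then be a limit point of approximating minimizers, hence $z\in \mathit{\Omega}(h,K)\subseteq U$, contradicting $z\notin U$. Equipped with this $\delta$, I set $\lambda:=\delta/(\alpha-\inf_{K}\ell)>0$ and consider the affine continuous functional $\tilde{m}(x):=\inf h(K)-\lambda(\ell(x)-\alpha)$. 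On $U\cap K$ one has $\ell(x)-\alpha>0$, so $\tilde{m}(x)\leq \inf h(K)\leq h(x)$; on $K\setminus U$ one has $0\leq \alpha-\ell(x)\leq \alpha-\inf_{K}\ell$, hence $\tilde{m}(x)-\inf h(K)=\lambda(\alpha-\ell(x))\leq \delta\leq h(x)-\inf h(K)$. Therefore $\tilde{m}|_{K}\leq h|_{K}$, yet $\tilde{m}(y)=\inf h(K)+\lambda(\alpha-\ell(y))>\inf h(K)=\Gamma_K(h)(y)$, which contradicts the defining supremum in Definition \ref{gamm regularisation}.

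The main obstacle is exactly the construction of the affine continuous minorant $\tilde{m}$ that strictly exceeds $\Gamma_K(h)(y)$ at $y$: the uniform constant $\delta$ furnished by the claim above is what allows the slope $\lambda$ to be taken strictly positive while still keeping $\tilde{m}$ below $h$ on the region $K\setminus U$, where $\ell-\alpha$ becomes nonpositive and the minorant inequality is nontrivial. All other steps --- closedness and convexity of $\mathit{M}$, the easy inclusion $\mathit{\Omega}(h,K)\subseteq \mathit{M}$, and the separation itself --- are standard once this uniform bound is in place.
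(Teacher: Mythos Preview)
Your proof is correct, and it takes a genuinely different and more direct route than the paper's argument. Both proofs handle part (i) and the easy inclusion $\overline{\mathrm{co}(\mathit{\Omega}(h,K))}\subseteq\mathit{M}$ in the same way, and both ultimately rely on the same uniform lower bound $\inf h(K\setminus U)>\inf h(K)$ for a suitable open neighborhood $U$ of $\mathit{\Omega}(h,K)$, obtained by compactness of $K\setminus U$. The divergence is in how this bound is exploited.

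The paper does not construct an affine minorant directly. Instead, it builds nested convex open neighborhoods $\mathcal{G}_x'\subseteq\mathcal{G}_x$ of $\overline{\mathrm{co}(\mathit{\Omega}(h,K))}$ avoiding the given point, invokes Urysohn's lemma to manufacture a \emph{continuous} function $f_x\leq h$ equal to $\inf h(K)$ on $\overline{\mathcal{G}_x'}$ and to $\inf h(K\setminus\mathcal{G}_x')$ on $K\setminus\mathcal{G}_x$, and then appeals to Theorem \ref{Thm - Corollary I.3.6} (a barycentric representation of $\Gamma_K(f_x)$ for continuous $f_x$) to conclude that every minimizer of $\Gamma_K(f_x)$ lies in $\overline{\mathrm{co}(\mathit{\Omega}(f_x,K))}\subseteq\overline{\mathcal{G}_x}$; since $\Gamma_K(f_x)\leq\Gamma_K(h)$, this traps $\mathit{M}$ away from the point. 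Your argument replaces this machinery with a single Hahn--Banach separation and the explicit affine minorant $\tilde m(x)=\inf h(K)-\lambda(\ell(x)-\alpha)$, which directly witnesses $\Gamma_K(h)(y)>\inf h(K)$ from the definition of the $\Gamma$--regularization. Your approach is shorter and avoids both Urysohn's lemma and the nontrivial Theorem \ref{Thm - Corollary I.3.6}; the paper's approach, on the other hand, is more geometric and makes the role of barycentric decompositions explicit, which ties in with the integral--representation viewpoint used elsewhere in the monograph.
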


\begin{proof}%
The assertion (i) is a standard result. Indeed, by Definition \ref{gamm
regularisation}, $\Gamma _{K}\left( h\right) \leq h$ on $K$ and thus%
\begin{equation*}
\inf \,\Gamma _{K}\left( h\right) \left( K\right) \leq \inf \,h\left(
K\right) .
\end{equation*}%
The converse inequality is derived by restricting the supremum in Definition %
\ref{gamm regularisation} to constant maps $m$ from $K$ to $\mathbb{R}$ with 
$\mathrm{k}\leq m\leq h$.

By Definition \ref{gamm regularisation}, we also observe that $\Gamma
_{K}\left( h\right) $ is a lower semi--continuous functional. This implies
that the variational problem $\inf \,\Gamma _{K}\left( h\right) (K)$ has
minimizers and the set $\mathit{M}=\mathit{\Omega }\left( \Gamma _{K}\left(
h\right) ,K\right) $ of all minimizers of $\Gamma _{K}\left( h\right) $ is
compact. Moreover, again by Definition \ref{gamm regularisation}, the
functional $\Gamma _{K}\left( h\right) $ is convex which obviously yields
the convexity of the set $\mathit{M}$.

For any $y\in \mathit{\Omega }\left( h,K\right) $, there is a net $\left\{
x_{i}\right\} _{i\in I}\subseteq K$ of approximating minimizers of $h$ on $K$
converging to $y$. In particular, since the functional $\Gamma _{K}\left(
h\right) $ is lower semi--continuous and $\Gamma _{K}\left( h\right) \leq h$
on $K$, we have that 
\begin{equation*}
\Gamma _{K}\left( h\right) (y)\leq \underset{I}{\liminf }\,\Gamma _{K}\left(
h\right) (x_{i})\leq \underset{I}{\lim }\,h(x_{i})=\inf \,h(K)=\inf \,\Gamma
_{K}\left( h\right) (K),
\end{equation*}%
i.e., $y\in \mathit{M}$. As $\mathit{M}$ is convex and compact we obtain
that 
\begin{equation}
\mathit{M}\supseteq \overline{\mathrm{co}\left( \mathit{\Omega }\left(
h,K\right) \right) }.  \label{inclusion1}
\end{equation}%
So, we prove now the converse inclusion. We can assume without loss of
generality that $\overline{\mathrm{co}\left( \mathit{\Omega }\left(
h,K\right) \right) }\neq K$ since there is otherwise nothing to prove. We
show\ next that, for any $x\in K\backslash \overline{\mathrm{co}\left( 
\mathit{\Omega }\left( h,K\right) \right) }$, we have $x\notin $ $\mathit{M}$%
.

As $\overline{\mathrm{co}\left( \mathit{\Omega }\left( h,K\right) \right) }$
is a closed set of a locally convex space $\mathcal{X}$, for any $x\in
K\backslash \overline{\mathrm{co}\left( \mathit{\Omega }\left( h,K\right)
\right) }$, there is an open and convex neighborhood $\mathcal{V}%
_{x}\subseteq $ $\mathcal{X}$ of $\{0\}\subseteq \mathcal{X}$ which is
symmetric, i.e., $\mathcal{V}_{x}=-\mathcal{V}_{x}$, and which satisfies 
\begin{equation*}
\mathcal{G}_{x}\cap \left[ \{x\}+\mathcal{V}_{x}\right] =\emptyset
\end{equation*}%
with%
\begin{equation*}
\mathcal{G}_{x}:=K\cap \left[ \overline{\mathrm{co}\left( \mathit{\Omega }%
\left( h,K\right) \right) }+\mathcal{V}_{x}\right] .
\end{equation*}%
This follows from \cite[Theorem 1.10]{Rudin} together with the fact that
each neighborhood of $\{0\}\subseteq \mathcal{X}$ contains some open and
convex neighborhood of $\{0\}\subseteq \mathcal{X}$ because $\mathcal{X}$ is
locally convex. Observe also that any one--point set $\{x\}\subseteq $ $%
\mathcal{X}$ is compact.

For any neighborhood $\mathcal{V}_{x}$ of $\{0\}\subseteq \mathcal{X}$ in a
locally convex space, there is another convex, symmetric, and open
neighborhood $\mathcal{V}_{x}^{\prime }$ of $\{0\}\subseteq \mathcal{X}$
such that $[\mathcal{V}_{x}^{\prime }+\mathcal{V}_{x}^{\prime }]\subseteq 
\mathcal{V}_{x}$, see proof of \cite[Theorem 1.10]{Rudin}. Let%
\begin{equation*}
\mathcal{G}_{x}^{\prime }:=K\cap \left[ \overline{\mathrm{co}\left( \mathit{%
\Omega }\left( h,K\right) \right) }+\mathcal{V}_{x}^{\prime }\right] .
\end{equation*}%
Then the following inclusions hold:%
\begin{equation}
\overline{\mathrm{co}\left( \mathit{\Omega }\left( h,K\right) \right) }%
\subseteq \mathcal{G}_{x}^{\prime }\subseteq \overline{\mathcal{G}%
_{x}^{\prime }}\subseteq \mathcal{G}_{x}\subseteq \overline{\mathcal{G}_{x}}%
\subseteq K\backslash \{x\}.  \label{eq inclusions}
\end{equation}%
Since $K$, $\mathcal{V}_{x}$, $\mathcal{V}_{x}^{\prime }$, and $\overline{%
\mathrm{co}\left( \mathit{\Omega }\left( h,K\right) \right) }$ are all
convex sets, $\mathcal{G}_{x}$ and $\mathcal{G}_{x}^{\prime }$ are also
convex. Seen as subsets of $K$ they are open neighborhoods of $\overline{%
\mathrm{co}\left( \mathit{\Omega }\left( h,K\right) \right) }$.

By Definition \ref{space}, the set $\mathcal{X}$ is a Hausdorff space and
thus any compact subset $K$ of $\mathcal{X}$ is a normal space. By Urysohn
lemma, there is a continuous function%
\begin{equation*}
f_{x}:K\rightarrow \lbrack \inf h(K),\inf h(K\backslash \mathcal{G}%
_{x}^{\prime })]
\end{equation*}%
satisfying $f_{x}\leq h$ and 
\begin{equation*}
f_{x}\left( y\right) =\left\{ 
\begin{array}{ll}
\inf h(K) & \mathrm{for\ }y\in \overline{\mathcal{G}_{x}^{\prime }}. \\ 
\inf h(K\backslash \mathcal{G}_{x}^{\prime }) & \mathrm{for\ }y\in
K\backslash \mathcal{G}_{x}.%
\end{array}%
\right.
\end{equation*}%
By compacticity of $K\backslash \mathcal{G}_{x}^{\prime }$ and the inclusion 
$\mathit{\Omega }\left( h,K\right) \subseteq \mathcal{G}_{x}^{\prime }$,
observe that%
\begin{equation*}
\inf h(K\backslash \mathcal{G}_{x}^{\prime })>\inf h(K).
\end{equation*}%
Then we have by construction that 
\begin{equation}
f_{x}(\overline{\mathrm{co}\left( \mathit{\Omega }\left( h,K\right) \right) }%
)=\{\inf h(K)\}  \label{Omega f sympa}
\end{equation}%
and%
\begin{equation}
f_{x}^{-1}(\inf h(K))=\mathit{\Omega }\left( f_{x},K\right) \subseteq 
\mathcal{G}_{x}  \label{Omega f sympabis}
\end{equation}%
for any $x\in K\backslash \overline{\mathrm{co}\left( \mathit{\Omega }\left(
h,K\right) \right) }$.

We use now the $\Gamma $--regularization $\Gamma _{K}\left( f_{x}\right) $
of $f_{x}$ on the set $K$ and denote by $\mathit{M}_{x}=\mathit{\Omega }%
\left( \Gamma _{K}(f_{x}),K\right) $ its non--empty set of minimizers over $%
K $. Applying Theorem \ref{Thm - Corollary I.3.6} for any $y\in \mathit{M}%
_{x}$ we have a probability measure $\mu _{y}\in M_{1}^{+}(K)$\ on $K$ with
barycenter $y$ such that 
\begin{equation}
\Gamma _{K}\left( f_{x}\right) \left( y\right) =\int_{K}\mathrm{d}\mu
_{y}(z)\;f_{x}\left( z\right) .  \label{herve bis}
\end{equation}%
As $y\in \mathit{M}_{x}$, i.e., 
\begin{equation}
\Gamma _{K}\left( f_{x}\right) \left( y\right) =\inf \,\Gamma _{K}\left(
f_{x}\right) (K)=\inf f_{x}(K),  \label{herve 2}
\end{equation}%
we deduce from (\ref{herve bis}) that 
\begin{equation*}
\mu _{y}(\mathit{\Omega }\left( f_{x},K\right) )=1
\end{equation*}%
and it follows that $y\in $ $\overline{\mathrm{co}\left( \mathit{\Omega }%
\left( f_{x},K\right) \right) }$, by Theorem \ref{thm barycenter}. By (\ref%
{Omega f sympabis}) together with the convexity of the open neighborhood $%
\mathcal{G}_{x}$ of $\overline{\mathrm{co}\left( \mathit{\Omega }\left(
h,K\right) \right) \text{,}}$ we thus obtain%
\begin{equation}
\mathit{M}_{x}\subseteq \overline{\mathrm{co}\left( \mathit{\Omega }\left(
f_{x},K\right) \right) }\subseteq \overline{\mathcal{G}_{x}}  \label{herve 3}
\end{equation}%
for any $x\in K\backslash \overline{\mathrm{co}\left( \mathit{\Omega }\left(
h,K\right) \right) }$.

We remark now that the inequality $f_{x}\leq h$ on $K$ yields $\Gamma
_{K}\left( f_{x}\right) \leq \Gamma _{K}\left( h\right) $ on $K$ because of
Corollary \ref{Biconjugate}. As a consequence, it results from (i) and (\ref%
{Omega f sympa}) that the set $\mathit{M}$ of minimizers of $\Gamma
_{K}\left( h\right) $ over $K$ is included in $\mathit{M}_{x}$, i.e., $%
\mathit{M}\subseteq \mathit{M}_{x}$. Hence, by (\ref{eq inclusions}) and (%
\ref{herve 3}), we have the inclusions 
\begin{equation}
\mathit{M}\subseteq \overline{\mathcal{G}_{x}}\subseteq K\backslash \{x\}.
\label{inclusion2bis}
\end{equation}%
Therefore, we combine (\ref{inclusion1}) with (\ref{inclusion2bis}) for all $%
x\in K\backslash \overline{\mathrm{co}\left( \mathit{\Omega }\left(
h,K\right) \right) }$ to obtain the desired equality in the assertion (ii). 
\end{proof}%

This last theorem can be useful to analyze variational problems with
non--convex functionals on compact convex sets $K$. Indeed, the minimization
of a real functional $h$ over $K$ can be done in this case by analyzing a
variational problem related to a lower semi--continuous convex functional $%
\Gamma _{K}\left( h\right) $ for which many different methods of analysis
are available.

To conclude, note that extreme points of the compact convex set $\mathit{M}$
belongs to the set $\mathit{\Omega }\left( h,K\right) $ and the
non--convexity of $\mathit{\Omega }\left( h,K\right) $ prevents the set $%
\mathit{M}$ from being homeomorphic to the Poulsen simplex:

\begin{theorem}[Minimization of real functionals -- II]
\label{theorem trivial sympa 1 copy(1)}\mbox{ }\newline
\index{Minimizers}%
\index{Minimization of real functionals}Let $K$ be any (non--empty) compact
convex subset of a locally convex space $\mathcal{X}$ and $h:K\rightarrow
\lbrack \mathrm{k},\infty ]$ be any real functional with $\mathrm{k}\in 
\mathbb{R}$. Then we have that:\newline
\emph{(i)} Extreme points of the compact convex set $\mathit{M}$ of
minimizers of $\Gamma _{K}\left( h\right) $ over $K$ belong to the closure
of the set of generalized minimizers of $h$, i.e., $\mathcal{E}\left( 
\mathit{M}\right) \subseteq 
\overline{\mathit{\Omega }\left( h,K\right) }$.\newline
\emph{(ii)} If $\mathcal{E}\left( \mathit{M}\right) $ is dense in $\mathit{M}
$ then $\overline{\mathit{\Omega }\left( h,K\right) }=\mathit{M}$ is a
compact and convex set.
\end{theorem}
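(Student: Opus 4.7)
My plan is to derive both assertions as straightforward consequences of the structural result of Theorem \ref{theorem trivial sympa 1} combined with Milman's theorem (Theorem \ref{lemma Milman} (ii)).

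For (i), the starting point is the identity $\mathit{M}=\overline{\mathrm{co}(\mathit{\Omega}(h,K))}$ given by Theorem \ref{theorem trivial sympa 1} (ii). The set $\mathit{\Omega}(h,K)\subseteq \mathit{M}$, and since $\mathit{M}$ is closed, so is its subset $\overline{\mathit{\Omega}(h,K)}\subseteq \mathit{M}$. I would then observe the elementary identity
\begin{equation*}
\overline{\mathrm{co}(\mathit{\Omega}(h,K))}=\overline{\mathrm{co}(\overline{\mathit{\Omega}(h,K)})},
\end{equation*}
which holds because $\overline{\mathit{\Omega}(h,K)}$ is contained in the closed convex set $\overline{\mathrm{co}(\mathit{\Omega}(h,K))}$, so its closed convex hull cannot exceed it, while the reverse inclusion is trivial. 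Hence $\mathit{M}$ is the closed convex hull of the (closed) set $Z:=\overline{\mathit{\Omega}(h,K)}$. Since $\mathit{M}$ is compact and convex (being a closed subset of the compact set $K$), Milman's theorem (Theorem \ref{lemma Milman} (ii)) applied to $\mathit{M}$ with $Z=\overline{\mathit{\Omega}(h,K)}$ yields
\begin{equation*}
\mathcal{E}(\mathit{M})\subseteq \overline{Z}=\overline{\mathit{\Omega}(h,K)},
\end{equation*}
which is assertion (i).

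For (ii), assume that $\mathcal{E}(\mathit{M})$ is dense in $\mathit{M}$. Taking the closure of the inclusion from (i) and using that $\overline{\mathit{\Omega}(h,K)}$ is already closed, I get $\overline{\mathcal{E}(\mathit{M})}\subseteq \overline{\mathit{\Omega}(h,K)}$, i.e., $\mathit{M}\subseteq \overline{\mathit{\Omega}(h,K)}$ by the density assumption. The converse inclusion $\overline{\mathit{\Omega}(h,K)}\subseteq \mathit{M}$ is immediate from $\mathit{\Omega}(h,K)\subseteq \mathit{M}$ and the closedness of $\mathit{M}$. Hence $\overline{\mathit{\Omega}(h,K)}=\mathit{M}$, which inherits compactness and convexity from $\mathit{M}$.

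The argument is essentially mechanical once Theorem \ref{theorem trivial sympa 1} is available; the only subtle point -- really the only place where something can go wrong -- is verifying that Milman's theorem can indeed be invoked, which requires $\mathit{M}$ to be compact convex and $Z$ to be closed so that the conclusion $\mathcal{E}(\mathit{M})\subseteq \overline{Z}$ directly gives a subset of $\overline{\mathit{\Omega}(h,K)}$ itself rather than some larger closure. Both conditions are satisfied here, so no deeper obstacle arises.
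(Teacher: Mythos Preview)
Your proof is correct and follows essentially the same approach as the paper: part (i) via Theorem \ref{theorem trivial sympa 1} (ii) and Milman's theorem (Theorem \ref{lemma Milman} (ii)), and part (ii) by combining (i) with the density assumption and the closedness of $\mathit{M}$. Your intermediate step of rewriting $\overline{\mathrm{co}(\mathit{\Omega}(h,K))}=\overline{\mathrm{co}(\overline{\mathit{\Omega}(h,K)})}$ is harmless but unnecessary, since Milman's theorem as stated already places $\mathcal{E}(\mathit{M})$ in the closure of $Z=\mathit{\Omega}(h,K)$ directly.
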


\begin{proof}%
The first statement (i) results from Theorem \ref{theorem trivial sympa 1}
(ii) together with Theorem \ref{lemma Milman} (ii). The second assertion
(ii) is also straightforward. Indeed, if $\mathcal{E}\left( \mathit{M}%
\right) $ is dense in $\mathit{M}$ then $\mathit{\Omega }\left( h,K\right) $
is also dense in $\mathit{M}$ as $\mathcal{E}\left( \mathit{M}\right)
\subseteq \mathit{\Omega }\left( h,K\right) $, by (i). As a consequence, $%
\mathit{M}=\overline{\mathit{\Omega }\left( h,K\right) }$. 
\end{proof}%

Therefore, if $K$ is metrizable and $\mathcal{E}\left( \mathit{M}\right) $
is dense in $\mathit{M}$ then, by Lemma \ref{theorem trivial sympa 0}
together with Theorem \ref{theorem trivial sympa 1 copy(1)} (ii), $\mathit{%
\Omega }\left( h,K\right) =\mathit{M}$ is a compact and convex set.

\section{The Legendre--Fenchel transform and tangent functionals\label%
{Section Legendre-Fenchel transform}}

In contrast to the $\Gamma $--regularization defined in Section \ref{Section
gamma regularization} the notion of Legendre--Fenchel transform requires the
use of dual pairs defined as follow:

\begin{definition}[Dual pairs]
\label{dual pairs}\mbox{ }\newline
\index{Dual pairs}For any locally convex space $(\mathcal{X},\tau )$, let $%
\mathcal{X}^{\ast }$ be its dual space, i.e., the set of all continuous
linear functionals on $\mathcal{X}$. Let $\tau ^{\ast }$ be any locally
convex topology on $\mathcal{X}^{\ast }$. $(\mathcal{X},\mathcal{X}^{\ast })$
is called a dual pair iff, for all $x\in \mathcal{X}$, the functional $%
y^{\ast }\mapsto y^{\ast }(x)$ on $\mathcal{X}^{\ast }$ is continuous w.r.t. 
$\tau ^{\ast }$, and all linear functionals which are continuous w.r.t. $%
\tau ^{\ast }$ have this form.
\end{definition}

\noindent By Theorem \ref{thm locally convex space}, a typical example of a
dual pair $(\mathcal{X},\mathcal{X}^{\ast })$ is given by any locally convex
space $(\mathcal{X},\tau )$ and $\mathcal{X}^{\ast }$ equipped with the $%
\sigma (X^{\ast },X)$--topology $\tau ^{\ast }$, i.e., the weak$^{\ast }$%
--topology. In particular, as $\mathcal{W}_{1}$ is a Banach space, by
Corollary \ref{thm locally convex spacebis}, $(\mathcal{W}_{1},\mathcal{W}%
_{1}^{\ast })$ is a dual pair w.r.t. the norm and weak$^{\ast }$%
--topologies. We also observe that if $(\mathcal{X},\mathcal{X}^{\ast })$ is
a dual pair w.r.t. $\tau $ and $\tau ^{\ast }$ then $(\mathcal{X}^{\ast },%
\mathcal{X})$ is a dual pair w.r.t. $\tau ^{\ast }$ and $\tau $.

The \emph{Legendre--Fenchel transform} of a functional $h$ on $\mathcal{X}$
-- also called the \emph{conjugate }(functional) of $h$ -- is defined as
follows:

\begin{definition}[The Legendre--Fenchel transform]
\label{Legendre--Fenchel transform}\mbox{ }\newline
\index{Legendre--Fenchel transform|textbf}Let $(\mathcal{X},\mathcal{X}%
^{\ast })$ be a dual pair. For any functional $h:\mathcal{X}\rightarrow
\left( -\infty ,\infty \right] $, its Legendre--Fenchel transform $h^{\ast }$
is the convex lower semi--continuous functional from $\mathcal{X}^{\ast }$
to $\left( -\infty ,\infty \right] $ defined, for any $x^{\ast }\in \mathcal{%
X}^{\ast }$, by 
\begin{equation*}
h^{\ast }\left( x^{\ast }\right) :=\underset{y\in \mathcal{X}}{\sup }\left\{
x^{\ast }\left( y\right) -h\left( y\right) \right\} .
\end{equation*}
\end{definition}

\noindent If a functional $h$ is only defined on a subset $K\subseteq 
\mathcal{X}$ of a locally convex space $\mathcal{X}$ then one uses
Definition \ref{extension of functional} to compute its Legendre--Fenchel
transform $h^{\ast }$.

The Legendre--Fenchel transform and the $\Gamma $--regularization $\Gamma _{%
\mathcal{X}}\left( h\right) $ of $h$ are strongly related to one another.
This can be seen in the next theorem which gives an important property\ --
proven, for instance, in \cite[Proposition 51.6]{Zeidler3}\ -- of the double
Legendre--Fenchel transform $h^{\ast \ast }$, also called the \emph{%
biconjugate }(functional) of $h$:

\begin{theorem}[Property of the biconjugate]
\label{theorem fenchel moreau}\mbox{ }\newline
\index{Legendre--Fenchel transform!biconjugate}Let $(\mathcal{X},\mathcal{X}%
^{\ast })$ be a dual pair and $h:\mathcal{X}\rightarrow (-\infty ,\infty ]$
be any real functional. Then $h^{\ast \ast }\leq h$ and $h^{\ast \ast }\leq 
\hat{h}\leq h$ implies $h^{\ast \ast }=\hat{h}$ whenever $\hat{h}$ is convex
and lower semi--continuous.
\end{theorem}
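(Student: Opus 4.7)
The first assertion $h^{\ast \ast }\leq h$ is immediate from the definitions: for any $x\in \mathcal{X}$ and any $x^{\ast }\in \mathcal{X}^{\ast }$, the inequality $h^{\ast }(x^{\ast })\geq x^{\ast }(x)-h(x)$ gives $x^{\ast }(x)-h^{\ast }(x^{\ast })\leq h(x)$, and taking the supremum over $x^{\ast }\in \mathcal{X}^{\ast }$ yields $h^{\ast \ast }(x)\leq h(x)$. The plan is to establish the reverse chain of inequalities under the stated hypothesis.

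Assume $\hat{h}$ is convex and lower semi--continuous with $h^{\ast \ast }\leq \hat{h}\leq h$. First I would observe that the Legendre--Fenchel transform is order--reversing: from $\hat{h}\leq h$ one immediately gets $\hat{h}^{\ast }\geq h^{\ast }$, hence $\hat{h}^{\ast \ast }\leq h^{\ast \ast }$. Combined with the already known inequality $h^{\ast \ast }\leq \hat{h}$, this reduces the problem to showing the single identity $\hat{h}^{\ast \ast }=\hat{h}$, since then $\hat{h}=\hat{h}^{\ast \ast }\leq h^{\ast \ast }\leq \hat{h}$, forcing equality throughout.

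To prove $\hat{h}=\hat{h}^{\ast \ast }$ for a convex lower semi--continuous $\hat{h}$, the key step is a Hahn--Banach separation argument on the epigraph
\begin{equation*}
\mathrm{epi}(\hat{h}):=\{(x,t)\in \mathcal{X}\times \mathbb{R}\;:\;\hat{h}(x)\leq t\},
\end{equation*}
which is a closed convex subset of $\mathcal{X}\times \mathbb{R}$ (closed by lower semi--continuity, convex by convexity of $\hat{h}$). Suppose for contradiction that there exist $x_{0}\in \mathcal{X}$ and $s\in \mathbb{R}$ with $\hat{h}^{\ast \ast }(x_{0})<s<\hat{h}(x_{0})$. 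Since $(x_{0},s)\notin \mathrm{epi}(\hat{h})$ and the latter is closed and convex in the product locally convex space, the geometric Hahn--Banach theorem produces a continuous linear functional on $\mathcal{X}\times \mathbb{R}$ strictly separating $(x_{0},s)$ from $\mathrm{epi}(\hat{h})$. By the dual pair assumption, this functional is of the form $(x,t)\mapsto x^{\ast }(x)+\alpha t$ for some $x^{\ast }\in \mathcal{X}^{\ast }$ and $\alpha \in \mathbb{R}$. A standard case analysis on the sign of $\alpha $ (the case $\hat{h}(x_{0})<\infty $ gives $\alpha <0$ after a rescaling, while $\hat{h}(x_{0})=\infty $ must be handled separately by first perturbing with an affine minorant) allows one to rescale to $\alpha =-1$ and extract an affine continuous minorant $m(x)=x^{\ast }(x)-c\leq \hat{h}(x)$ with $m(x_{0})>s$. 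By definition of the conjugate, $c\geq \hat{h}^{\ast }(x^{\ast })$, so $x^{\ast }(x_{0})-\hat{h}^{\ast }(x^{\ast })\geq m(x_{0})>s>\hat{h}^{\ast \ast }(x_{0})$, contradicting the definition of $\hat{h}^{\ast \ast }$ as a supremum.

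The main obstacle is the case $\hat{h}(x_{0})=+\infty $, where the vertical separation fails to yield an affine minorant directly. This is the standard technical point in the proof of the Fenchel--Moreau theorem and is resolved by first using a separation at another point of $\mathrm{dom}(\hat{h})$ to produce some continuous affine minorant $m_{0}\leq \hat{h}$, then perturbing the separating hyperplane at $(x_{0},s)$ by a multiple of $m_{0}$ to achieve $\alpha <0$. Once this case is handled, both inequalities assemble to give $\hat{h}^{\ast \ast }=\hat{h}$, completing the argument.
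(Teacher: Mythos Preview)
Your proof is correct. Note, however, that the paper does not actually provide a proof of this theorem: it is stated in the appendix as a standard result and the reader is referred to \cite[Proposition 51.6]{Zeidler3} for the argument. Your approach --- reducing to the Fenchel--Moreau identity $\hat{h}^{\ast\ast}=\hat{h}$ for convex lower semi--continuous $\hat{h}$ via the order--reversing property of the transform, and then proving that identity by Hahn--Banach separation of the epigraph --- is precisely the classical route found in standard references such as Zeidler. The handling of the case $\hat{h}(x_0)=+\infty$ by perturbing with a previously obtained affine minorant is also the customary device. One minor point you might make explicit for completeness: the trivial case $\hat{h}\equiv+\infty$ (empty domain) should be dispatched separately at the outset, since your perturbation argument presupposes $\mathrm{dom}(\hat{h})\neq\emptyset$; in that case $\hat{h}^{\ast}\equiv-\infty$ and $\hat{h}^{\ast\ast}\equiv+\infty=\hat{h}$ directly.
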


\noindent By using Theorem \ref{theorem fenchel moreau} together with
Proposition \ref{lemma gamma regularisation}, we observe that $h^{\ast \ast
} $ is thus equal to the $\Gamma $--regularization $\Gamma _{\mathcal{X}%
}\left( h\right) $ of $h$:

\begin{corollary}[Biconjugate and $\Gamma $--regularization of $h$]
\label{theorem fenchel moreaubis}\mbox{ }\newline
\index{Legendre--Fenchel transform!Gamma--regularization}%
\index{Gamma--regularization!Legendre--Fenchel transform}Let a dual pair $(%
\mathcal{X},\mathcal{X}^{\ast })$ and $h:\mathcal{X}\rightarrow (-\infty
,\infty ]$ be any real functional. Then $h^{\ast \ast }=\Gamma _{\mathcal{X}%
}\left( h\right) $ on $\mathcal{X}$.
\end{corollary}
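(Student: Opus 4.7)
The plan is to apply Theorem~\ref{theorem fenchel moreau} directly with $\hat{h} = \Gamma_{\mathcal{X}}(h)$, using only that $\Gamma_{\mathcal{X}}(h)$ is a convex, lower semi--continuous minorant of $h$ sandwiched between $h^{\ast\ast}$ and $h$.

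First, I would observe from Definition~\ref{Legendre--Fenchel transform} that $h^{\ast\ast}$ is convex and lower semi--continuous on $\mathcal{X}$ (being itself a Legendre--Fenchel transform, namely $(h^{\ast})^{\ast}$ viewed through the dual pair $(\mathcal{X}^{\ast},\mathcal{X})$, which is a dual pair whenever $(\mathcal{X},\mathcal{X}^{\ast})$ is). By Theorem~\ref{theorem fenchel moreau}, $h^{\ast\ast}\leq h$ on $\mathcal{X}$. Meanwhile, by Definition~\ref{gamm regularisation}, $\Gamma_{\mathcal{X}}(h)$ is a supremum of affine continuous (hence convex and lower semi--continuous) functions that minorize $h$, so $\Gamma_{\mathcal{X}}(h)\leq h$ on $\mathcal{X}$, and by Proposition~\ref{lemma gamma regularisation} (or directly Corollary~\ref{Biconjugate}), $\Gamma_{\mathcal{X}}(h)$ is itself convex and lower semi--continuous.

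Next, I would establish the inequality $h^{\ast\ast}\leq \Gamma_{\mathcal{X}}(h)$ on $\mathcal{X}$. Since $h^{\ast\ast}$ is a convex lower semi--continuous minorant of $h$, and Corollary~\ref{Biconjugate} asserts that $\Gamma_{\mathcal{X}}(h)$ is the \emph{largest} such minorant, this inclusion is immediate. Combined with $\Gamma_{\mathcal{X}}(h)\leq h$ from the previous step, we obtain the chain
\begin{equation*}
h^{\ast\ast}\leq \Gamma_{\mathcal{X}}(h)\leq h \quad \text{on } \mathcal{X},
\end{equation*}
with the middle term convex and lower semi--continuous.

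Finally, I would invoke the second assertion of Theorem~\ref{theorem fenchel moreau} with $\hat{h}:=\Gamma_{\mathcal{X}}(h)$: this precisely says that whenever $h^{\ast\ast}\leq \hat{h}\leq h$ with $\hat{h}$ convex and lower semi--continuous, one has $h^{\ast\ast}=\hat{h}$. Hence $h^{\ast\ast}=\Gamma_{\mathcal{X}}(h)$ on $\mathcal{X}$, proving the corollary. There is really no hard step here; the only subtlety is making sure we are entitled to treat $h^{\ast\ast}$ as a lower semi--continuous convex functional on $\mathcal{X}$ via the dual pair structure, which is ensured by Definition~\ref{dual pairs} and the symmetry of that definition in $\mathcal{X}$ and $\mathcal{X}^{\ast}$.
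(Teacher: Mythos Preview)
Your proof is correct and follows exactly the approach indicated in the paper, which derives the corollary ``by using Theorem~\ref{theorem fenchel moreau} together with Corollary~\ref{Biconjugate}.'' You have simply made explicit the sandwich $h^{\ast\ast}\leq \Gamma_{\mathcal{X}}(h)\leq h$ and the convexity/lower semi--continuity of the middle term, which is precisely what those two ingredients provide.
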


Another important notion related to the Legendre--Fenchel transform is the
concept of tangent functionals on real linear spaces:

\begin{definition}[Tangent functionals]
\label{tangent functional}\mbox{ }\newline
\index{Tangent functionals|textbf}Let $h$ be any real functional on a real
linear space $\mathcal{X}$. A linear functional $\mathrm{d}h:\mathcal{X}%
\rightarrow \mathbb{(-\infty },\infty ]$ is said to be tangent to the
function $h$ at $x\in \mathcal{X}$ iff, for all $x^{\prime }\in \mathcal{X}$%
, $h(x+x^{\prime })\geq h(x)+\mathrm{d}h(x^{\prime })$.
\end{definition}

\noindent If $\mathcal{X}$ is a separable real Banach space and $h$ is
convex and continuous then it is well--known that $h$ has, on each point $%
x\in \mathcal{X}$, at least one continuous tangent functional \textrm{d}$%
h\in \mathcal{X}^{\ast }$. This is a crucial result coming from Mazur
theorem \cite{Mazur} and Lanford III -- Robinson theorem \cite[Theorem 1]%
{LanRob}. Indeed, Mazur theorem describes the set $\mathcal{Y}$ where $h$
has exactly one continuous tangent functional \textrm{d}$h(x)\in \mathcal{X}%
^{\ast }$ at any $x\in \mathcal{Y}$:

\begin{theorem}[Mazur]
\label{Mazur}\mbox{ }\newline
\index{Mazur theorem|textbf}Let $\mathcal{X}$ be a separable real Banach
space and let $h:\mathcal{X}\rightarrow \mathbb{R}$ be a continuous convex
functional. The set $\mathcal{Y\subseteq X}$ of elements where $h$ has
exactly one continuous tangent functional \textrm{d}$h(x)\in \mathcal{X}%
^{\ast }$ at $x\in \mathcal{Y}$ is residual, i.e., a countable intersection
of dense open sets.
\end{theorem}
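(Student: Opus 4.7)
The plan is to show first that at every point a continuous tangent functional exists (via Hahn--Banach), and then that the set of points admitting exactly one such functional is a countable intersection of dense open sets. I would start from the observation that, by convexity of $h$, for every $x,v \in \mathcal{X}$ the difference quotient $t \mapsto t^{-1}(h(x+tv) - h(x))$ is non--decreasing in $t>0$, so the one--sided directional derivative
\[
g_{v}(x) := \inf_{t>0}\frac{h(x+tv) - h(x)}{t} = \lim_{t\downarrow 0}\frac{h(x+tv) - h(x)}{t}
\]
exists and is finite. The continuous tangent functionals $\mathrm{d}h \in \mathcal{X}^{\ast}$ at $x$ are exactly the continuous linear maps majorized by $v \mapsto g_{v}(x)$; such functionals exist by Hahn--Banach applied to the sublinear, norm--continuous map $v \mapsto g_{v}(x)$, and uniqueness of $\mathrm{d}h$ is equivalent to the linearity of $v \mapsto g_{v}(x)$, which by sublinearity is in turn equivalent to
\[
\varphi_{v}(x) := g_{v}(x) + g_{-v}(x) = 0 \qquad \text{for every } v \in \mathcal{X}.
\]

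Next, I would exploit that $x \mapsto \varphi_{v}(x)$ is the pointwise infimum over $t>0$ of the continuous map $x \mapsto t^{-1}(h(x+tv) + h(x-tv) - 2h(x))$, hence upper semi--continuous and non--negative. Therefore, for each $v \in \mathcal{X}$, the set
\[
\mathcal{Y}_{v} := \{x \in \mathcal{X} : \varphi_{v}(x) = 0\} = \bigcap_{n \in \mathbb{N}} \left\{x \in \mathcal{X} : \varphi_{v}(x) < 1/n\right\}
\]
is a $G_{\delta}$ set. Since the sublinear map $v \mapsto g_{v}(x)$ is norm--continuous on $\mathcal{X}$ at each fixed $x$, and since $\mathcal{X}$ is separable with some countable dense subset $\{v_{n}\}_{n \in \mathbb{N}}$, the tangent functional at $x$ is unique if and only if $x$ belongs to $\mathcal{Y} := \bigcap_{n} \mathcal{Y}_{v_{n}}$. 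This reduces the claim to showing that each $\mathcal{Y}_{v_{n}}$ is a dense $G_{\delta}$ set, after which the Baire category theorem identifies $\mathcal{Y}$ as a residual set, namely a countable intersection of dense open sets (the open sets being $\{\varphi_{v_{n}} < 1/m\}$ for $n,m \in \mathbb{N}$).

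The step I expect to be the main obstacle is the density of each $\mathcal{Y}_{v}$. I would derive this from a one--dimensional reduction: given any $x_{0} \in \mathcal{X}$ and any $v \neq 0$, the map $s \mapsto h(x_{0} + sv)$ is a continuous convex function on $\mathbb{R}$, hence differentiable off a countable set $N \subseteq \mathbb{R}$. At each $s \in \mathbb{R} \setminus N$ the two one--sided derivatives of this real convex function coincide, which is precisely the statement $\varphi_{v}(x_{0} + sv) = 0$, so that $x_{0} + sv \in \mathcal{Y}_{v}$. Choosing $|s|$ arbitrarily small and in $\mathbb{R} \setminus N$ yields density of $\mathcal{Y}_{v}$ in $\mathcal{X}$. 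Combining this with the previous $G_{\delta}$ identification proves that $\mathcal{Y}$ is residual and completes the argument.
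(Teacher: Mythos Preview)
Your argument is correct and is essentially the standard proof of Mazur's theorem: existence of tangent functionals via Hahn--Banach applied to the sublinear directional derivative, the characterization of uniqueness through $\varphi_v(x)=g_v(x)+g_{-v}(x)=0$, upper semi--continuity of $\varphi_v$ giving the $G_\delta$ structure, separability reducing to countably many directions, and density of each $\mathcal{Y}_v$ via the one--dimensional fact that real convex functions are differentiable off a countable set. One small point you might make explicit is that continuity of $h$ on a Banach space implies local Lipschitz continuity, which is what guarantees both that $v\mapsto g_v(x)$ is norm--Lipschitz (so Hahn--Banach yields a \emph{continuous} tangent functional and the reduction to a dense sequence $\{v_n\}$ is legitimate) and that the open sets $\{\varphi_{v_n}<1/m\}$ are genuinely open.

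As for the comparison: the paper does not give a proof of this theorem. It is stated in the appendix as a classical result and attributed to Mazur's 1933 paper \cite{Mazur}, with only the remark that Baire's theorem then yields density of $\mathcal{Y}$. So there is no approach in the paper to compare against; your proof simply fills in what the paper takes for granted.
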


\begin{remark}
\label{Mazur remark}By Baire category theorem, the set $\mathcal{Y}$ is
dense in $\mathcal{X}$.
\end{remark}

\noindent Lanford III -- Robinson theorem \cite[Theorem 1]{LanRob} completes
Mazur theorem by characterizing the set of continuous tangent functionals 
\textrm{d}$h(x)\in \mathcal{X}^{\ast }$ for any $x\in \mathcal{X}$. In
particular, there is at least one continuous tangent functional \textrm{d}$%
h(x)\in \mathcal{X}^{\ast }$ at any $x\in \mathcal{X}$.

\begin{theorem}[Lanford III -- Robinson]
\label{Land.Rob}\mbox{ }\newline
\index{Lanford III -- Robinson theorem|textbf}Let $\mathcal{X}$ be a
separable real Banach space and let $h:\mathcal{X}\rightarrow \mathbb{R}$ be
a continuous convex functional. Then the set of tangent functionals $\mathrm{%
d}h(x)\in \mathcal{X}^{\ast }$ to $h$, at any $x\in \mathcal{X}$, is the weak%
$^{\ast }$--closed convex hull of the set $\mathcal{Z}_{x}$. Here, at fixed $%
x\in \mathcal{X}$, $\mathcal{Z}_{x}$ is the set of functionals $x^{\ast }\in 
\mathcal{X}^{\ast }$ such that there is a net $\{x_{i}\}_{i\in I}$ in $%
\mathcal{Y}$ converging to $x$ with the property that the unique tangent
functional $\mathrm{d}h(x_{i})\in \mathcal{X}^{\ast }$ to $h$ at $x_{i}$
converges towards $x^{\ast }$ in the weak$^{\ast }$--topology.
\end{theorem}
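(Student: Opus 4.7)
The plan is to establish the two inclusions
\begin{equation*}
\overline{\mathrm{co}}^{w^{\ast }}(\mathcal{Z}_{x})\ \subseteq \ T_{x}\ \subseteq \ \overline{\mathrm{co}}^{w^{\ast }}(\mathcal{Z}_{x})
\end{equation*}
where $T_{x}\subseteq \mathcal{X}^{\ast }$ denotes the set of continuous tangent functionals to $h$ at $x$, and $\overline{\mathrm{co}}^{w^{\ast }}$ stands for the weak$^{\ast }$--closed convex hull. First I would record the classical preliminary facts: since $h$ is continuous and convex, it is locally Lipschitz, and consequently $T_{x}$ is non--empty, convex, and weak$^{\ast }$--compact (the latter by Banach--Alaoglu applied to a uniform norm bound), and moreover the support function of $T_{x}$ coincides with the right derivative
\begin{equation*}
h_{+}^{\prime }(x,y):=\lim_{t\downarrow 0}\,t^{-1}\{h(x+ty)-h(x)\}=\sup\{x^{\ast }(y)\,:\,x^{\ast }\in T_{x}\},\quad y\in \mathcal{X}.
\end{equation*}

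\noindent\textbf{Easy inclusion.} For the first inclusion, fix any net $\{x_{i}\}_{i\in I}$ in $\mathcal{Y}$ converging to $x$ with $\mathrm{d}h(x_{i})\rightarrow x^{\ast }$ in the weak$^{\ast }$--topology. Applying the tangency inequality $h(x_{i}+y)-h(x_{i})\geq \mathrm{d}h(x_{i})(y)$ for each $y\in \mathcal{X}$ and passing to the limit using continuity of $h$ and weak$^{\ast }$--convergence gives $x^{\ast }\in T_{x}$, so $\mathcal{Z}_{x}\subseteq T_{x}$. Since $T_{x}$ is convex and weak$^{\ast }$--closed, $\overline{\mathrm{co}}^{w^{\ast }}(\mathcal{Z}_{x})\subseteq T_{x}$.

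\noindent\textbf{Hard inclusion.} I would argue by contradiction. Suppose there exists $x^{\ast }\in T_{x}$ with $x^{\ast }\notin \overline{\mathrm{co}}^{w^{\ast }}(\mathcal{Z}_{x})$. Since $(\mathcal{X}^{\ast },\mathcal{X})$ is a dual pair in the weak$^{\ast }$--topology (Definition~\ref{dual pairs} and Corollary~\ref{thm locally convex spacebis}), the Hahn--Banach separation theorem in $\mathcal{X}^{\ast }$ yields some $y\in \mathcal{X}$ and $\alpha \in \mathbb{R}$ such that
\begin{equation*}
\sup \{z^{\ast }(y)\,:\,z^{\ast }\in \mathcal{Z}_{x}\}\ <\ \alpha \ <\ x^{\ast }(y).
\end{equation*}
The goal is to contradict this by producing some $z^{\ast }\in \mathcal{Z}_{x}$ with $z^{\ast }(y)\geq h_{+}^{\prime }(x,y)\geq x^{\ast }(y)$, the last inequality following from the support--function identity above.

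\noindent\textbf{Key approximation step.} This is the main technical obstacle. I would use Mazur's theorem (Theorem~\ref{Mazur}) combined with Remark~\ref{Mazur remark} to choose, for each $n\in \mathbb{N}$, a point $x_{n}\in \mathcal{Y}$ with $\Vert x_{n}-(x+n^{-1}y)\Vert \leq n^{-2}$. Using local Lipschitz continuity of $h$ near $x$ one has $h(x_{n})=h(x)+n^{-1}h_{+}^{\prime }(x,y)+o(n^{-1})$ (by definition of $h_{+}^{\prime }$). The tangency of $\mathrm{d}h(x_{n})$ at $x_{n}$ applied to the direction $-y$ at parameter $s=n^{-1}$ gives
\begin{equation*}
h(x_{n}-n^{-1}y)\ \geq \ h(x_{n})-n^{-1}\mathrm{d}h(x_{n})(y),
\end{equation*}
i.e.\ $\mathrm{d}h(x_{n})(y)\geq n\{h(x_{n})-h(x_{n}-n^{-1}y)\}$. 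Since $x_{n}-n^{-1}y\rightarrow x$ at rate $O(n^{-2})$ and $h$ is locally Lipschitz, $h(x_{n}-n^{-1}y)=h(x)+o(n^{-1})$, and the right--hand side reduces to $h_{+}^{\prime }(x,y)+o(1)$. Hence $\liminf_{n}\mathrm{d}h(x_{n})(y)\geq h_{+}^{\prime }(x,y)$. The tangent functionals $\mathrm{d}h(x_{n})$ are uniformly norm--bounded by the local Lipschitz constant of $h$ near $x$, so by Banach--Alaoglu a subnet converges weak$^{\ast }$ to some $z^{\ast }\in \mathcal{Z}_{x}$, which satisfies $z^{\ast }(y)\geq h_{+}^{\prime }(x,y)\geq x^{\ast }(y)>\alpha $. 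This contradicts the separation inequality and completes the proof.

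\noindent The hard part is the Mazur--approximation bound in the last paragraph: one must control simultaneously (i)~the location of $x_{n}$ within $\mathcal{Y}$, (ii)~the value of $h$ at $x_{n}$ up to $o(n^{-1})$ errors, and (iii)~the behaviour of $\mathrm{d}h(x_{n})$ acting on the \emph{specific} direction $y$. The use of local Lipschitz continuity (which holds only because $h$ is convex and continuous on a Banach space) is essential for reconciling (i)--(iii).
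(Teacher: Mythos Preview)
The paper does not give its own proof of Theorem~\ref{Land.Rob}: it is stated in the appendix as a classical result, with an explicit citation to \cite[Theorem~1]{LanRob}, and is used only as background (see Remark~\ref{remark Minimization of real functionals} and the discussion around Theorem~\ref{eq.tang.bcs.type}). So there is nothing to compare against.

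That said, your proposal is a correct reconstruction of the standard argument and is essentially the one found in the original Lanford--Robinson paper. The two inclusions are handled in the right way: the easy one by passing to the limit in the tangency inequality, the hard one by Hahn--Banach separation in $(\mathcal{X}^{\ast },w^{\ast })$ followed by a Mazur approximation along the ray $x+ty$. Your control of the error terms is sound: choosing $x_{n}\in \mathcal{Y}$ within $n^{-2}$ of $x+n^{-1}y$ and using local Lipschitz continuity (available because $h$ is convex and continuous on a Banach space) does give $h(x_{n})=h(x)+n^{-1}h_{+}^{\prime }(x,y)+o(n^{-1})$ and $h(x_{n}-n^{-1}y)=h(x)+o(n^{-1})$, so the tangency inequality at $x_{n}$ yields $\mathrm{d}h(x_{n})(y)\geq h_{+}^{\prime }(x,y)+o(1)$. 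Any weak$^{\ast }$ accumulation point $z^{\ast }$ of the uniformly bounded sequence $\{\mathrm{d}h(x_{n})\}$ then lies in $\mathcal{Z}_{x}$ (since $x_{n}\rightarrow x$) and satisfies $z^{\ast }(y)\geq h_{+}^{\prime }(x,y)\geq x^{\ast }(y)$, contradicting the separation. One minor point worth making explicit: to ensure $z^{\ast }(y)\geq h_{+}^{\prime }(x,y)$ you should note that \emph{every} weak$^{\ast }$ accumulation point inherits this bound, since $\mathrm{d}h(x_{n})(y)\geq h_{+}^{\prime }(x,y)-\varepsilon $ eventually for each $\varepsilon >0$; alternatively, first pass to a subsequence realising the $\liminf $ in the $y$--direction before extracting a weak$^{\ast }$--convergent subnet.
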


The Legendre--Fenchel transform and the tangent functionals are also related
to each other via the $\Gamma $--regularization of real functionals. Indeed,
the $\Gamma $--regularization $\Gamma _{\mathcal{X}}\left( h\right) $ of a
real functional $h$ allows to characterize all tangent functionals to $%
h^{\ast }$ at the point $x^{\ast }\in \mathcal{X}^{\ast }$ (see, e.g., \cite[%
Theorem I.6.6]{Simon}):

\begin{theorem}[Tangent functionals as minimizers]
\label{theorem trivial sympa 2}\mbox{ }\newline
\index{Minimizers!tangent|textbf}%
\index{Tangent functionals!minimizers|textbf}Let $(\mathcal{X},\mathcal{X}%
^{\ast })$ be a a dual pair and $h$ be any real functional from a
(non--empty) convex subset $K\subseteq \mathcal{X}$ to $(-\infty ,\infty ]$.
Then the set $\mathit{T}\subseteq \mathcal{X}$ of tangent functionals to $%
h^{\ast }$ at the point $x^{\ast }\in \mathcal{X}^{\ast }$ is the
(non--empty) set $\mathit{M}$ of minimizers over $K$ of the map 
\begin{equation*}
y\mapsto -x^{\ast }\left( y\right) +\Gamma _{K}\left( h\right) \left(
y\right)
\end{equation*}%
from $K\subseteq \mathcal{X}$ to $(-\infty ,\infty ]$.
\end{theorem}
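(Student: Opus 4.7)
The plan is to reduce the claim to the Fenchel--Young equality by exploiting the identification $\Gamma_{K}(h)=h^{\ast\ast}$ provided by Corollary~\ref{theorem fenchel moreaubis}. I extend $h$ from $K$ to all of $\mathcal{X}$ by setting it equal to $+\infty$ on $\mathcal{X}\setminus K$ (Definition~\ref{extension of functional}), so that $h^{\ast}$ on $\mathcal{X}^{\ast}$ and the biconjugate $h^{\ast\ast}$ on $\mathcal{X}$ are both well defined. By Corollary~\ref{theorem fenchel moreaubis}, $h^{\ast\ast}=\Gamma_{\mathcal{X}}(h)$, which agrees with $\Gamma_{K}(h)$ on $K$; moreover the identity $h^{\ast\ast\ast}=h^{\ast}$ (a direct consequence of Definition~\ref{Legendre--Fenchel transform}) yields $(\Gamma_{K}(h))^{\ast}=h^{\ast}$ on $\mathcal{X}^{\ast}$.

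Next I would characterize tangency analytically. Under the duality of Definition~\ref{dual pairs}, an element $y\in\mathcal{X}$ acts on $\mathcal{X}^{\ast}$ through the pairing $u^{\ast}\mapsto u^{\ast}(y)$, and the tangency of $y$ to $h^{\ast}$ at $x^{\ast}$ (Definition~\ref{tangent functional}) reads
\begin{equation*}
h^{\ast}(u^{\ast})-u^{\ast}(y)\;\geq\;h^{\ast}(x^{\ast})-x^{\ast}(y)\qquad\forall\,u^{\ast}\in\mathcal{X}^{\ast}.
\end{equation*}
Taking the supremum over $u^{\ast}\in\mathcal{X}^{\ast}$ of the left-hand side, negated, gives $h^{\ast\ast}(y)\leq x^{\ast}(y)-h^{\ast}(x^{\ast})$, while the universal Fenchel--Young inequality yields the reverse inequality. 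So $y$ is tangent to $h^{\ast}$ at $x^{\ast}$ if and only if the Fenchel--Young equality
\begin{equation*}
\Gamma_{K}(h)(y)+h^{\ast}(x^{\ast})\;=\;x^{\ast}(y)
\end{equation*}
holds.

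From the computation $\inf_{y\in\mathcal{X}}\bigl\{-x^{\ast}(y)+\Gamma_{K}(h)(y)\bigr\}=-(\Gamma_{K}(h))^{\ast}(x^{\ast})=-h^{\ast}(x^{\ast})$, this Young equality is equivalent to $y$ being a (global) minimizer of $F(y):=-x^{\ast}(y)+\Gamma_{K}(h)(y)$ on $\mathcal{X}$. To conclude $\mathit{T}=\mathit{M}$ I still have to see that such minimizers lie in $K$; this is the step that will need the most care. The point is that if $y\notin K$ then, by applying Hahn--Banach to the convex set $K$ and the point $\{y\}$, there exists a continuous linear functional separating them, which (combined with $h$ being bounded below on the relevant domain, encoded in $h^{\ast}(x^{\ast})<\infty$) allows one to produce affine continuous minorants of $h|_{K}$ whose values at $y$ are arbitrarily large, forcing $\Gamma_{K}(h)(y)=+\infty$; hence the effective domain of $\Gamma_{K}(h)$ is contained in $K$ and the minimization over $\mathcal{X}$ reduces to one over $K$.

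Finally, non-emptiness of $\mathit{M}$ (hence of $\mathit{T}$) follows because $F$ is convex and lower semi-continuous on the compact set~$K$ when the setting of the paper applies, and even in general because the variational identity $\inf F=-h^{\ast}(x^{\ast})$ together with the approximating minimizers $\{y_{n}\}\subseteq K$ reaches that value by definition of the supremum in $h^{\ast}(x^{\ast})=\sup_{y\in K}\{x^{\ast}(y)-h(y)\}\geq\sup_{y\in K}\{x^{\ast}(y)-\Gamma_{K}(h)(y)\}$, where one needs to exploit $\Gamma_{K}(h)\leq h$; the main obstacle in the proof is this separation/reduction argument ensuring tangent functionals stay in $K$ rather than merely in $\overline{K}$ or in the larger ambient space.
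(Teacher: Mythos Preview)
Your approach is essentially the same as the paper's: both rewrite the tangency condition as the Fenchel--Young equality using $\Gamma_{K}(h)=h^{\ast\ast}$ and $h^{\ast}=h^{\ast\ast\ast}$, which immediately characterizes tangent functionals as minimizers of $y\mapsto -x^{\ast}(y)+\Gamma_{K}(h)(y)$ over $\mathcal{X}$. The paper's proof is two lines and simply cites this as standard; you carry out the same computation in more detail.

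Where you go further than the paper is in worrying about two points it glosses over: (i) why the minimizers lie in $K$ rather than merely in $\mathcal{X}$, and (ii) why $\mathit{M}$ is non-empty. Your separation argument for (i) is correct but, as you yourself flag, it needs $K$ closed (to strictly separate $y\notin K$ from $K$); the theorem as stated only assumes $K$ convex. In the paper's only application (Corollary~\ref{theorem trivial sympa 3}), $K$ is compact, so closedness holds and your argument goes through. For (ii), non-emptiness genuinely requires something like compactness of $K$; your ``even in general'' argument via approximating minimizers does not establish existence of an actual minimizer without a compactness step, so that part is not quite right as written. In short: same proof, and your extra care correctly identifies where the paper's statement is slightly loose relative to its terse proof.
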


\begin{proof}%
The proof is standard and simple, see, e.g., \cite[Theorem I.6.6]{Simon}.
Indeed, by Definition \ref{extension of functional}, any tangent functional $%
x\in \mathcal{X}$ to $h^{\ast }$ at $x^{\ast }\in \mathcal{X}$ satisfies the
inequality: 
\begin{equation}
x^{\ast }\left( x\right) +h^{\ast }\left( y^{\ast }\right) -y^{\ast }\left(
x\right) \geq h^{\ast }\left( x^{\ast }\right)  \label{landford1landford1}
\end{equation}%
for any $y^{\ast }\in \mathcal{X}^{\ast }$. Since $\Gamma _{K}\left(
h\right) =h^{\ast \ast }$ and $h^{\ast }=h^{\ast \ast \ast }$, we have (\ref%
{landford1landford1}) iff 
\begin{equation*}
x^{\ast }\left( x\right) +\underset{y^{\ast }\in \mathcal{X}^{\ast }}{\inf }%
\left\{ h^{\ast }\left( y^{\ast }\right) -y^{\ast }\left( x\right) \right\}
=x^{\ast }\left( x\right) -\Gamma _{K}\left( h\right) \left( x\right) \geq 
\underset{y\in \mathcal{X}}{\sup }\left\{ x^{\ast }\left( y\right) -\Gamma
_{K}\left( h\right) \left( y\right) \right\} .
\end{equation*}%
\end{proof}%

We combine Theorem \ref{theorem trivial sympa 1} with Theorem \ref{theorem
trivial sympa 2} to characterize the set $\mathit{T}\subseteq \mathcal{X}$
of tangent functionals to $h^{\ast }$ at the point $0\in \mathcal{X}^{\ast }$
as the closed convex hull of the set $\mathit{\Omega }\left( h,K\right) $ of
generalized minimizers of $h$ over a compact convex subset $K$, see
Definition \ref{gamm regularisation copy(4)}.

\begin{corollary}[Tangent functional and generalized minimizers]
\label{theorem trivial sympa 3}\mbox{ }\newline
\index{Tangent functionals!minimizers!generalized}Let $(\mathcal{X},\mathcal{%
X}^{\ast })$ be a dual pair and $h$ be any functional from a (non--empty)
compact convex subset $K\subseteq \mathcal{X}$ to $[\mathrm{k},\infty ]$
with $\mathrm{k}\in \mathbb{R}$. Then the set $\mathit{T}\subseteq \mathcal{X%
}$ of tangent functionals to $h^{\ast }$ at the point $0\in \mathcal{X}%
^{\ast }$ is the set 
\begin{equation*}
\mathit{T}=\mathit{M}=%
\overline{\mathrm{co}\left( \mathit{\Omega }\left( h,K\right) \right) }
\end{equation*}%
of minimizers of $\Gamma _{K}\left( h\right) $ over $K$, see Theorem \ref%
{theorem trivial sympa 1}.
\end{corollary}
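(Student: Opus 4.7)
The plan is to directly combine the two theorems that have just been proved, namely Theorem \ref{theorem trivial sympa 1} (which identifies the set of minimizers of $\Gamma_K(h)$ over $K$ with $\overline{\mathrm{co}(\mathit{\Omega}(h,K))}$) and Theorem \ref{theorem trivial sympa 2} (which characterizes tangent functionals of $h^{\ast}$ as minimizers of a tilted regularized functional).

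More concretely, I would first apply Theorem \ref{theorem trivial sympa 2} to the dual pair $(\mathcal{X},\mathcal{X}^{\ast})$ and the functional $h:K\to[\mathrm{k},\infty]$, after extending $h$ to all of $\mathcal{X}$ by the usual $+\infty$-extension outside $K$ (Definition \ref{extension of functional}); this is allowed since $[\mathrm{k},\infty]\subseteq(-\infty,\infty]$. Choosing the distinguished point $x^{\ast}=0\in\mathcal{X}^{\ast}$, the theorem states that a point $y\in\mathcal{X}$ is tangent to $h^{\ast}$ at $0$ if and only if $y$ minimizes the map $y\mapsto -0(y)+\Gamma_K(h)(y)=\Gamma_K(h)(y)$ over $K$. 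Hence $\mathit{T}=\mathit{M}$, where $\mathit{M}$ denotes the set of minimizers of $\Gamma_K(h)$ on $K$.

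The second step is to apply Theorem \ref{theorem trivial sympa 1}(ii) to the same data $(K,h)$; the hypothesis $h:K\to[\mathrm{k},\infty]$ is exactly the one required there, and the conclusion is precisely $\mathit{M}=\overline{\mathrm{co}(\mathit{\Omega}(h,K))}$. Chaining the two identifications yields
\begin{equation*}
\mathit{T}=\mathit{M}=\overline{\mathrm{co}(\mathit{\Omega}(h,K))},
\end{equation*}
as asserted.

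Since both ingredients are already established, there is no real obstacle beyond a consistency check: one must make sure that the $\Gamma$-regularization appearing in Theorem \ref{theorem trivial sympa 2} (taken w.r.t.\ the locally convex topology underlying the dual pair) agrees with $\Gamma_K(h)$ from Definition \ref{gamm regularisation}, and that compactness of $K$ is not used essentially in Theorem \ref{theorem trivial sympa 2} (only in Theorem \ref{theorem trivial sympa 1}, whose hypotheses are assumed here). This is the only delicate point, but it is immediate because Theorem \ref{theorem trivial sympa 2} is phrased directly in terms of $\Gamma_K(h)$ and the extended functional takes value $+\infty$ off $K$, so the two regularizations coincide as real-valued functionals on $K$.
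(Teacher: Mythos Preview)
Your proposal is correct and follows essentially the same approach as the paper, which simply states that the corollary is obtained by combining Theorem \ref{theorem trivial sympa 1} with Theorem \ref{theorem trivial sympa 2}. Your argument spells out precisely this combination: Theorem \ref{theorem trivial sympa 2} at $x^{\ast}=0$ gives $\mathit{T}=\mathit{M}$, and Theorem \ref{theorem trivial sympa 1}(ii) gives $\mathit{M}=\overline{\mathrm{co}(\mathit{\Omega}(h,K))}$.
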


\noindent This last result has some similarity with Lanford III -- Robinson
theorem (Theorem \ref{Land.Rob}) which has only been proven for separable
real Banach spaces $\mathcal{X}$ and continuous and convex functionals $h:%
\mathcal{X}\rightarrow \mathbb{R}$.

\section{Two--person zero--sum games%
\index{Zero--sum games!two--person |textbf}\label{Section two--person
zero--sum games}}

A study of two--person zero--sum games belongs to any elementary book on
game theory. These are defined via a map $(x,y)\mapsto f(x,y)$ from the
strategy set $M\times N$ to $\mathbb{R}$. Here, $M\subseteq \mathcal{X}$ and 
$N\subseteq \mathcal{Y}$ are subsets of two topological vector spaces $%
\mathcal{X}$ and $\mathcal{Y}$. The value $f(x,y)$ is the loss of the first
player making the decision $x$ and the gain of the second one making the
decision $y$. Without exchange of information and by minimizing the
functional%
\begin{equation*}
f^{\sharp }\left( x\right) :=\underset{y\in N}{\sup }f\left( x,y\right)
\end{equation*}%
the first player obtains her/his least maximum loss 
\begin{equation*}
\mathrm{F}^{\sharp }:=\underset{x\in M}{\inf }f^{\sharp }\left( x\right) ,
\end{equation*}%
whereas the greatest minimum gain of the second player is 
\begin{equation*}
\mathrm{F}^{\flat }:=\underset{y\in N}{\sup }f^{\flat }\left( y\right) \quad 
\mathrm{with}\quad f^{\flat }\left( y\right) :=\underset{x\in M}{\inf }%
f\left( x,y\right) .
\end{equation*}%
$\mathrm{F}^{\flat }$ and $\mathrm{F}^{\sharp }$ are called the \emph{%
conservative values} 
\index{Zero--sum games!two--person !conservative values}of the game. The
sets 
\begin{equation*}
\mathcal{C}^{\sharp }:=\left\{ x\in M:\mathrm{F}^{\sharp }=f^{\sharp }\left(
x\right) \right\} \quad \mathrm{and}\quad \mathcal{C}^{\flat }:=\left\{ y\in
N:\mathrm{F}^{\flat }=f^{\flat }\left( y\right) \right\}
\end{equation*}%
are the so--called set of \emph{conservatives strategies}%
\index{Zero--sum games!two--person !conservatives strategies} and $[\mathrm{F%
}^{\flat },\mathrm{F}^{\sharp }]$ is the \emph{duality interval}%
\index{Zero--sum games!two--person !duality interval}.

\emph{Non--cooperative equilibria} (or Nash equilibria) \cite[Definition 7.4.%
]{Aubin} of two--person zero--sum games are also called \emph{saddle points}%
. They are defined as follows:

\begin{definition}[Saddle points]
\label{definition saddle points}\mbox{ }\newline
\index{Saddle point|textbf}%
\index{Zero--sum games!two--person !saddle points}Let $M\subseteq \mathcal{X}
$ and $N\subseteq \mathcal{Y}$ be two subsets of topological vector spaces $%
\mathcal{X}$ and $\mathcal{Y}$. Then the element $(x_{0},y_{0})\in M\times N$
is a saddle point of the real functional $f:M\times N\rightarrow \mathbb{R}$
iff $x_{0}\in \mathcal{C}^{\sharp }$, $y_{0}\in \mathcal{C}^{\flat }$, and $%
\mathrm{F}:=\mathrm{F}^{\flat }=\mathrm{F}^{\sharp }$.
\end{definition}

\noindent It follows from this definition that a saddle point $%
(x_{0},y_{0})\in M\times N$ satisfies $\mathrm{F}=f(x_{0},y_{0})$. In this
case $\mathrm{F}:=\mathrm{F}^{\flat }=\mathrm{F}^{\sharp }$ is called the 
\emph{value of the game}. As a $\sup $ and a $\inf $ do not generally
commute we have in general $\mathrm{F}^{\flat }<\mathrm{F}^{\sharp }$ and
so, no saddle point of a two--person zero--sum game. An important criterion
for the existence of saddle points is given by the von Neumann min--max
theorem \cite[Theorem 8.2]{Aubin}:

\begin{theorem}[von Neumann]
\label{theorem minmax von Neumann}\mbox{ }\newline
\index{von Neumann min--max theorem|textbf}Let $M\subseteq \mathcal{X}$ and $%
N\subseteq \mathcal{Y}$ be two (non--empty) compact convex subsets of
topological vector spaces $\mathcal{X}$ and $\mathcal{Y}$. Assume that $%
f:M\times N\rightarrow \mathbb{R}$ is a real functional such that, for all $%
y\in N$, the map $x\mapsto f(x,y)$ is convex and lower semi--continuous,
whereas, for all $x\in M$, the map $y\mapsto f(x,y)$ is concave and upper
semi--continuous. Then there exists a saddle point $(x_{0},y_{0})\in M\times
N$ of $f$.
\end{theorem}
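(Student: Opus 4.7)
The plan is to prove the theorem in two stages: first establish existence of conservative strategies and the trivial half of the minimax equality, then grind out the nontrivial equality $\mathrm{F}^{\sharp}=\mathrm{F}^{\flat}$, from which a saddle point is extracted by pairing any conservative strategies of the two players. This is the classical pattern for von Neumann's theorem.

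First I would observe that for each fixed $x\in M$ the map $y\mapsto f(x,y)$ is upper semi--continuous on the compact set $N$, so its supremum is attained, and moreover $f^{\sharp}(x)=\sup_{y\in N}f(x,y)$ is lower semi--continuous on $M$ as a supremum of a family of lower semi--continuous functions. Since $M$ is compact, $f^{\sharp}$ attains its infimum, so $\mathcal{C}^{\sharp}\neq\emptyset$; symmetrically $\mathcal{C}^{\flat}\neq\emptyset$. The inequality $\mathrm{F}^{\flat}\le\mathrm{F}^{\sharp}$ is automatic from $\inf_x f(x,y)\le f(x,y)\le \sup_y f(x,y)$, valid for every $(x,y)\in M\times N$.

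The core of the argument is to prove the reverse inequality $\mathrm{F}^{\sharp}\le \mathrm{F}^{\flat}$. I would argue by contradiction: suppose there exists $c\in\mathbb{R}$ with $\mathrm{F}^{\flat}<c<\mathrm{F}^{\sharp}$. For each $y\in N$, set
\begin{equation*}
A_{y}:=\{x\in M : f(x,y)\le c\}.
\end{equation*}
By the assumed lower semi--continuity and convexity of $x\mapsto f(x,y)$, each $A_y$ is closed and convex (hence compact) in $M$. The assumption $c<\mathrm{F}^{\sharp}=\inf_{x}\sup_{y}f(x,y)$ means that for every $x\in M$ there exists $y\in N$ with $f(x,y)>c$, i.e.\ $x\notin A_y$; consequently $\bigcap_{y\in N}A_{y}=\emptyset$. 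By compactness there must then be a finite subfamily $y_1,\dots,y_n\in N$ with $\bigcap_{i=1}^{n}A_{y_i}=\emptyset$. The hard part is now to derive a contradiction from the dual assumption $\mathrm{F}^{\flat}<c$, namely that $\inf_{x}f(x,y)<c$ for every $y\in N$. This is the step where one must use concavity and upper semi--continuity of $y\mapsto f(x,y)$, together with the convexity of $M$ and $N$; the standard route is via a KKM--type (Knaster--Kuratowski--Mazurkiewicz) covering argument applied to the simplex spanned by $y_1,\dots,y_n\in N$, showing that the multivalued map $y\mapsto A_y$ has the KKM property on this finite convex hull and hence admits a common point, contradicting $\bigcap_{i}A_{y_i}=\emptyset$. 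The essential input for the KKM step is that for any convex combination $y=\sum \lambda_i y_i$ the concavity of $y\mapsto f(x,y)$ forces
\begin{equation*}
f(x,y)\ge \sum_{i}\lambda_{i}f(x,y_{i}),
\end{equation*}
so that $x\in A_y$ whenever $x\in A_{y_i}$ for all relevant $i$; combined with upper semi--continuity in $y$, this yields the KKM covering condition and produces the required common point.

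The main obstacle is exactly this KKM step, since transporting finite--dimensional minimax reasoning (Brouwer's fixed point theorem, for instance) to a topological vector space setting requires a careful reduction to the convex hull of a finite family in $N$, after which one works inside a finite--dimensional simplex. Once the KKM contradiction is obtained, $c$ cannot exist, so $\mathrm{F}^{\sharp}=\mathrm{F}^{\flat}=:\mathrm{F}$. To finish, I would pick any $x_0\in\mathcal{C}^{\sharp}$ and any $y_0\in\mathcal{C}^{\flat}$; then
\begin{equation*}
\mathrm{F}=f^{\flat}(y_{0})\le f(x_{0},y_{0})\le f^{\sharp}(x_{0})=\mathrm{F},
\end{equation*}
so equalities hold throughout, which is precisely the saddle point condition of Definition \ref{definition saddle points}, yielding the desired $(x_0,y_0)$.
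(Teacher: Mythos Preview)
The paper does not supply its own proof of this theorem; it simply cites \cite[Theorem 8.2]{Aubin} and treats the result as classical background. So there is no ``paper's proof'' to compare against. Your overall scheme---existence of conservative strategies via semi--continuity and compactness, the easy inequality $\mathrm{F}^{\flat}\le\mathrm{F}^{\sharp}$, then a level--set argument with $A_y=\{x:f(x,y)\le c\}$ and a compactness reduction to finitely many $y_1,\dots,y_n$---is indeed the standard route and is correct in outline.

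There is, however, a genuine error at the heart of your KKM step. From concavity you correctly obtain $f(x,y)\ge\sum_i\lambda_i f(x,y_i)$ for $y=\sum_i\lambda_i y_i$, but then you conclude ``$x\in A_y$ whenever $x\in A_{y_i}$ for all relevant $i$''. This does not follow: if each $f(x,y_i)\le c$, the inequality only says $f(x,y)\ge(\text{something}\le c)$, which gives no upper bound on $f(x,y)$. Moreover the conclusion $\bigcap_i A_{y_i}\subseteq A_y$ would be vacuous anyway, since you have already arranged $\bigcap_i A_{y_i}=\emptyset$. The concavity must be used in the opposite direction. From $\bigcap_{i=1}^n A_{y_i}=\emptyset$ one first extracts---by a separation argument in $\mathbb{R}^n$ applied to the closed convex set $\{t\in\mathbb{R}^n:\exists\,x\in M,\ t_i\ge f(x,y_i)-c\ \forall i\}$, which does not contain the origin---a probability vector $(\lambda_i)$ with $\sum_i\lambda_i f(x,y_i)>c$ for every $x\in M$. \emph{Then} concavity gives $f(x,y_\lambda)\ge\sum_i\lambda_i f(x,y_i)>c$ for the single point $y_\lambda:=\sum_i\lambda_i y_i\in N$, whence $f^{\flat}(y_\lambda)=\inf_x f(x,y_\lambda)\ge c>\mathrm{F}^{\flat}$, the desired contradiction. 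Your ingredients are right; they are assembled backwards at the decisive moment.
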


If the game ends up with a maximum loss $\mathrm{F}^{\sharp }$ for the first
player then it means that the second player has full information on the
choice of the first one. Indeed, the second player maximizes his gain $%
f\left( x,y\right) $ knowing always the choice $x$ of the first player.
(Similar interpretations can of course be done if one gets $\mathrm{F}%
^{\flat }$ instead of $\mathrm{F}^{\sharp }$.)

Another way to highlight this phenomenon can be done by introducing the
so--called \emph{decision rule} $r:M\rightarrow N$.%
\index{Zero--sum games!two--person !decision rule} Indeed, from \cite[%
Proposition 8.7]{Aubin} we have 
\begin{equation}
\mathrm{F}^{\sharp }=\underset{r\in N^{M}}{\sup }f^{\flat }\left( r\left(
x\right) \right) =\underset{r\in N^{M}}{\sup }\ \underset{x\in M}{\inf }%
f\left( x,r\left( x\right) \right)  \label{equation lasry}
\end{equation}%
with $N^{M}$ being the set of all decision rules (functions from $M$ to $N$%
). It means that the second player is informed of the choice $x$ of the
first player and uses a decision rule to maximize his gain. Under stronger
assumptions on the sets $M$, $N$ and on the map $(x,y)\mapsto f(x,y)$ (cf. 
\cite[Theorem 8.4]{Aubin}), observe that the second player can restrict
himself to \emph{continuous decision rules} only:

\begin{theorem}[Lasry]
\label{theorem extension games}\mbox{ }\newline
\index{Lasry theorem|textbf}Let $M\subseteq \mathcal{X}$ and $N\subseteq 
\mathcal{Y}$ be two subsets of topological vector spaces $\mathcal{X}$ and $%
\mathcal{Y}$ such that $M$ is compact and $N$ is convex. Assume that $%
f:M\times N\rightarrow \mathbb{R}$ is a real functional such that, for all $%
y\in N$, the map $x\mapsto f(x,y)$ is lower semi--continuous, whereas, for
all $x\in M$, the map $y\mapsto f(x,y)$ is concave. Then%
\begin{equation*}
\underset{x\in M}{\inf }\ \underset{y\in N}{\sup }f\left( x,y\right) =%
\underset{r\in \mathrm{C}\left( M,N\right) }{\sup }\ \underset{x\in M}{\inf }%
f\left( x,r\left( x\right) \right)
\end{equation*}%
with $\mathrm{C}\left( M,N\right) $ being the set of continuous mappings
from $M$ to $N$.
\end{theorem}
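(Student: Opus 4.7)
The plan is to establish the two inequalities separately. The direction $\sup_{r \in \mathrm{C}(M,N)} \inf_{x \in M} f(x, r(x)) \leq \inf_{x \in M} \sup_{y \in N} f(x,y)$ is immediate: for any continuous decision rule $r$ and any $x \in M$, one has trivially $f(x, r(x)) \leq \sup_{y \in N} f(x,y)$, and taking infima over $x$ followed by suprema over $r$ preserves this. The content of the theorem is the reverse inequality, which asserts the existence of near-optimal continuous decision rules.

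For the nontrivial direction I would fix $\epsilon > 0$ and write $v := \inf_{x \in M} \sup_{y \in N} f(x,y)$ (allowing $v = +\infty$, in which case $v - \epsilon$ is replaced below by an arbitrary real $K$). For each $x_0 \in M$, the definition of $v$ furnishes a strategy $y_{x_0} \in N$ with $f(x_0, y_{x_0}) > v - \epsilon/2$. The key local step is then to exploit the lower semi-continuity of $x \mapsto f(x, y_{x_0})$ to obtain an open neighborhood $U_{x_0}$ of $x_0$ on which this section exceeds $v - \epsilon$. Compactness of $M$ then extracts a finite subcover $\{U_{x_i}\}_{i=1}^n$ with associated strategies $y_1, \ldots, y_n \in N$.

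The next step is to patch these finitely many ``locally good'' strategies into a single continuous decision rule. Since the ambient space $\mathcal{X}$ is Hausdorff by Definition \ref{space}, $M$ is compact Hausdorff and hence normal, so a continuous partition of unity $\{\phi_i\}_{i=1}^n$ subordinate to $\{U_{x_i}\}_{i=1}^n$ exists. I would then set
\begin{equation*}
r_\epsilon(x) := \sum_{i=1}^n \phi_i(x)\, y_i.
\end{equation*}
Convexity of $N$ guarantees $r_\epsilon(x) \in N$ for every $x$, while continuity of the vector-space operations on $\mathcal{Y}$ together with continuity of each $\phi_i$ yields $r_\epsilon \in \mathrm{C}(M,N)$. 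Using concavity of $y \mapsto f(x,y)$ at a fixed $x$, together with the fact that $\phi_i(x) > 0$ forces $x \in U_{x_i}$ (and therefore $f(x,y_i) > v - \epsilon$), one obtains pointwise
\begin{equation*}
f(x, r_\epsilon(x)) \geq \sum_{i=1}^n \phi_i(x)\, f(x, y_i) \geq (v - \epsilon)\sum_{i=1}^n \phi_i(x) = v - \epsilon,
\end{equation*}
and the conclusion follows by taking the infimum over $x$ and letting $\epsilon \downarrow 0$.

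The principal obstacle will be making sure the partition-of-unity construction is available under the hypotheses as stated: $M$ is only assumed to be a compact subset of a topological vector space, but Rudin's convention (Definition \ref{space}) forces Hausdorffness, so $M$ is normal and a finite open cover admits a continuous subordinate partition of unity. A secondary subtlety is the unbounded case $v = +\infty$, handled by running the same construction with an arbitrarily large threshold $K$ in place of $v - \epsilon$; one should also verify that the concavity step requires no upper semi-continuity and is purely an inequality between real numbers, which it is.
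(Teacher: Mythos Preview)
The paper does not prove this theorem; it is quoted as a known result from Aubin \cite[Theorem 8.4]{Aubin} in the appendix on two--person zero--sum games, with no proof supplied. Your argument is the standard partition--of--unity proof of Lasry's theorem and is correct: the easy inequality is immediate, and for the nontrivial direction you correctly exploit lower semi--continuity to build a finite open cover indexed by near--optimal replies, then use normality of the compact Hausdorff set $M$ (guaranteed by Definition~\ref{space}) to obtain a subordinate partition of unity, and finally combine convexity of $N$ with concavity of $y\mapsto f(x,y)$ to conclude. The handling of the case $v=+\infty$ and the observation that no upper semi--continuity is needed in the concavity step are both fine.
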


Equation (\ref{equation lasry}) or Theorem \ref{theorem extension games} can
be interpreted as an \emph{extension} of the two--person zero--sum game 
\emph{with exchange of information}.%
\index{Zero--sum games!two--person !extension} Extension of games are
defined for instance in \cite[Ch. 7, Section 7.2]{Aubinbis}. In the special
case of two--person zero--sum games, saddle point may not exist, but such a
non--cooperative equilibrium may appear by extending the strategy sets $M$
or $N$ (or both). This is, in fact, what we prove in Theorem \ref{lemma
extension game} for the extended thermodynamic game.

\backmatter

\chapter*{Index of Notation}

\noindent \textbf{Lattice and related matters}\bigskip

\noindent For any set $M$, we define $\mathcal{P}_{f}(M)$ to be the set of
all finite subsets of $M$.

\noindent $\mathfrak{L}=\mathbb{Z}^{d}$ seen as a set (lattice), see
Notation \ref{Notation1}.\smallskip

\noindent $d:\mathfrak{L}\times \mathfrak{L}\rightarrow \lbrack 0,\infty )$
is the Euclidean metric defined by (\ref{def.dist}).\smallskip

\noindent $\mathbb{Z}_{%
\vec{\ell}}^{d}:=\ell _{1}\mathbb{Z}\times \cdots \times \ell _{d}\mathbb{Z}$
for $\vec{\ell}\in \mathbb{N}^{d}$.\smallskip

\noindent $\Lambda _{l}$ is the cubic boxe of volume $|\Lambda
_{l}|=(2l+1)^{d}$ for $l\in \mathbb{N}$ defined by (\ref{cubic box}%
).\smallskip

\noindent $\Lambda +x$ is the translation of the set $\Lambda \in \mathcal{P}%
_{f}(\mathfrak{L})$ defined by (\ref{definition de lambda translate}%
).\smallskip

\noindent ${\o }(\Lambda )$ is the diameter of the set $\Lambda \in \mathcal{%
P}_{f}(\mathfrak{L})$ defined by (\ref{diameter of lambda}).\bigskip

\noindent \textbf{The fermion }$C^{\ast }$\textbf{--algebra and related
matters}\bigskip

\noindent $\mathcal{U}_{\Lambda }$ is the complex Clifford algebra with
identity $\mathbf{1}$ and generators $\{a_{x,\mathrm{s}},a_{x,\mathrm{s}%
}^{+}\}_{x\in \Lambda ,\mathrm{s}\in \mathrm{S}}$ satisfying the so--called
canonical anti--commutation relations (CAR), see (\ref{CAR}).\smallskip

\noindent $\mathcal{U}_{0}$ is the $\ast $--algebra of local elements, see (%
\ref{local elements}).\smallskip

\noindent $\mathcal{U}$ is the fermion (field) $C^{\ast }$--algebra, also
known as the CAR algebra.\smallskip

\noindent $\mathcal{U}^{+}$ is the $\ast $--algebra of of all even elements,
see (\ref{definition of even operators}).\smallskip

\noindent $\mathcal{U}^{\circ }$ is the $\ast $--algebra of of all gauge
invariant elements, see (\ref{definition of gauge invariant operators}) and
Notation \ref{Notation2}.\smallskip \smallskip

\noindent $\sigma _{\theta }$ is the automorphism of the algebra $\mathcal{U}
$ defined by (\ref{definition of gauge}).\smallskip

\noindent $\sigma ^{\circ }$ is the projection on the fermion observable
algebra $\mathcal{U}^{\circ }$, see Remark \ref{proj.gauge.inv}.\smallskip

\noindent $x\mapsto \alpha _{x}$ is the homomorphism from $\mathbb{Z}^{d}$
to the group of $\ast $--automorphisms of $\mathcal{U}$ defined by (\ref%
{transl}).\smallskip

\noindent $\pi \mapsto \alpha _{\pi }$ is the homomorphism from $\Pi $ to
the group of $\ast $--automorphisms of $\mathcal{U}$ defined by (\ref%
{definition perm automorphism}). \bigskip

\noindent \textbf{Sets of states}\bigskip

\noindent $\mathcal{U}^{\ast }$ is the dual space of the Banach space $%
\mathcal{U}$.\smallskip

\noindent $E\subseteq \mathcal{U}^{\ast }$ is the set of all states on $%
\mathcal{U}$.\smallskip

\noindent $E_{\Lambda }\subseteq \mathcal{U}_{\Lambda }^{\ast }$ for $%
\Lambda \in \mathcal{P}_{f}(\mathfrak{L})$ is the set of all states $\rho
_{\Lambda }$ on the local sub--algebra $\mathcal{U}_{\Lambda }$.\smallskip

\noindent $E_{\vec{\ell}}$ for $\vec{\ell}\in \mathbb{N}^{d}$ is the set of
all $\mathbb{Z}_{\vec{\ell}}^{d}$--invariant states defined by (\ref%
{periodic invariant states}).\smallskip

\noindent $E_{1}:=E_{(1,\cdots ,1)}$ is the set of all translation invariant
(t.i.) states.\smallskip

\noindent $E_{1}^{\circ }$ is the set of of translation and gauge invariant
states, see Remark \ref{t.i. + gauge inv states}.\smallskip

\noindent $E_{\Pi }$ is the set of all permutation invariant states defined
by (\ref{permutation inv states}).\smallskip

\noindent $E_{\otimes }$ is the set of product states.\smallskip

\noindent $\mathcal{E}_{\vec{\ell}}$ is the set of extreme points of the set 
$E_{\vec{\ell}}$ for $\vec{\ell}\in \mathbb{N}^{d}$.\smallskip

\noindent $\mathcal{E}_{1}:=\mathcal{E}_{(1,\cdots ,1)}$ is the set of t.i.
extreme states.\smallskip

\noindent $\mathcal{E}_{\Pi }$ is the set of extreme points of $E_{\Pi }$%
.\bigskip

\noindent \textbf{Sets of (generalized) minimizers of variational problems
on states}\bigskip

\noindent $\mathit{M}_{\Phi }$ is the set of t.i. equilibrium states of a
t.i. interaction $\Phi \in \mathcal{W}_{1}\subseteq \mathcal{M}_{1}$, see (%
\ref{equivalence def equilibrium states}).

\noindent $\mathit{M}_{\mathfrak{m}}^{\flat }$ is the set of t.i. minimizers
of $f_{\mathfrak{m}}^{\flat }$, see (\ref{set of minimizers de f bemol}%
).\smallskip

\noindent $\mathit{M}_{\mathfrak{m}}^{\sharp }$ is the set of t.i.
equilibrium states of a model $\mathfrak{m}\in \mathcal{M}_{1}$, see
Definition \ref{definition equilibirum state copy(1)}.\smallskip

\noindent $\mathit{\hat{M}}_{\mathfrak{m}}$ is the set of t.i. minimizers of
the reduced free--energy density functional $g_{\mathfrak{m}}$ defined by (%
\ref{definition minimizers of reduced free energy}). \smallskip

\noindent $\mathit{\Omega }_{\mathfrak{m}}^{\sharp }$ is the set of
generalized t.i. equilibrium states of a model $\mathfrak{m}\in \mathcal{M}%
_{1}$, see Definition \ref{definition equilibirum state}.\smallskip

\noindent $\mathit{\Omega }_{\mathfrak{m}}^{\sharp }\left( c_{a}\right) $ is
the subset (\ref{subset of a face}) of $\mathit{M}_{\Phi (c_{a})}$
satisfying the gap equations. \bigskip

\noindent \textbf{Banach space of all t.i.\ interactions}\bigskip

\noindent $\mathcal{W}_{1}$ is the real Banach space of all t.i.\
interactions, see Definition \ref{definition banach space interaction}.
\smallskip

\noindent $\Vert \,\cdot \,\Vert _{\mathcal{W}_{1}}$ is the norm of $%
\mathcal{W}_{1}$.\smallskip

\noindent $\mathcal{W}_{1}^{\mathrm{f}}\subseteq \mathcal{W}_{1}$ is the set
of all finite range t.i. interactions.\smallskip

\noindent $\mathcal{W}_{1}^{\ast }$ is the dual space of $\mathcal{W}_{1}$%
.\smallskip

\noindent $E_{1}\subseteq \mathcal{W}_{1}^{\ast }$ is also seen as including
in $\mathcal{W}_{1}^{\ast }$, see Section \ref{Section state=functional on W}%
\smallskip

\noindent $\mathcal{K}_{1}$ is the real Banach space of all t.i. interaction
kernels, see Definition \ref{def4.3}. \smallskip

\noindent $\Vert \,\cdot \,\Vert _{\mathcal{K}_{1}}$ is the norm of $%
\mathcal{K}_{1}$.\bigskip

\noindent \textbf{Banach space of long--range models}\bigskip

\noindent $(\mathcal{A},\mathfrak{A},\mathfrak{a})$ is a separable measure
space with $\mathfrak{A}$ and $\mathfrak{a}:\mathfrak{A}\rightarrow \mathbb{R%
}_{0}^{+}$ being respectively some $\sigma $--algebra on $\mathcal{A}$\ and
some measure on $\mathfrak{A}$.\smallskip

\noindent $\gamma _{a}\in \{-1,1\}$ is a fixed measurable function.\smallskip

\noindent $\gamma _{a,\pm }:=1/2(|\gamma _{a}|\pm \gamma _{a})\in \{0,1\}$,
see (\ref{remark positive negative part gamma}).\smallskip

\noindent $\mathcal{M}_{1}$ is the Banach space of long--range models, see
Definition \ref{definition M1bis}.\smallskip

\noindent $\Vert \,\cdot \,\Vert _{\mathcal{M}_{1}}$ is the norm of $%
\mathcal{M}_{1}$.\smallskip

\noindent $\mathcal{M}_{1}^{\mathrm{f}}\subseteq \mathcal{M}_{1}$ is the
sub--space of all finite range models. \smallskip

\noindent $\mathcal{M}_{1}^{\mathrm{d}}\subseteq \mathcal{M}_{1}$ is the
sub--space of discrete elements. \smallskip

\noindent $\mathcal{M}_{1}^{\mathrm{df}}:=\mathcal{M}_{1}^{\mathrm{d}}\cap 
\mathcal{M}_{1}^{\mathrm{f}}$. \smallskip

\noindent $\{\Phi _{a}\}_{a\in \mathcal{A}}$ is the long--range interaction
of any $\mathfrak{m}:=(\Phi ,\{\Phi _{a}\}_{a\in \mathcal{A}},\{\Phi
_{a}^{\prime }\}_{a\in \mathcal{A}})\in \mathcal{M}_{1}$, see Definition \ref%
{long range attraction-repulsion}.\smallskip

\noindent $\{\Phi _{a,-}:=\gamma _{a,-}\Phi _{a}\}_{a\in \mathcal{A}},\{\Phi
_{a,-}^{\prime }:=\gamma _{a,-}\Phi _{a}^{\prime }\}_{a\in \mathcal{A}}\in 
\mathcal{L}^{2}\left( \mathcal{A},\mathcal{W}_{1}\right) $ are the
long--range attractions of any $\mathfrak{m}\in \mathcal{M}_{1}$, see
Definition \ref{long range attraction-repulsion}.\smallskip

\noindent $\{\Phi _{a,+}:=\gamma _{a,+}\Phi _{a}\}_{a\in \mathcal{A}},\{\Phi
_{a,+}^{\prime }:=\gamma _{a,+}\Phi _{a}^{\prime }\}_{a\in \mathcal{A}}\in 
\mathcal{L}^{2}\left( \mathcal{A},\mathcal{W}_{1}\right) $ are the
long--range repulsions of any $\mathfrak{m}\in \mathcal{M}_{1}$, see
Definition \ref{long range attraction-repulsion}.\smallskip

\noindent $\mathcal{N}_{1}$ is the Banach space (\ref{definition model
noyaux}). \smallskip

\noindent $\Vert \,\cdot \,\Vert _{\mathcal{N}_{1}}$ is the norm of $%
\mathcal{N}_{1}$.\smallskip

\noindent $\mathcal{N}_{1}^{\mathrm{f}}\subseteq \mathcal{N}_{1}$ is the
sub--space of all finite range models of $\mathcal{N}_{1}$. \smallskip

\noindent $\mathcal{N}_{1}^{\mathrm{d}}\subseteq \mathcal{N}_{1}$ is the
sub--space of discrete elements of $\mathcal{N}_{1}$. \smallskip

\noindent $\mathcal{N}_{1}^{\mathrm{df}}:=\mathcal{N}_{1}^{\mathrm{d}}\cap 
\mathcal{N}_{1}^{\mathrm{f}}$. \bigskip

\noindent \textbf{Space--averaging functionals}\bigskip

\noindent $A_{L,\vec{\ell}}\in \mathcal{U}$ for $A\in \mathcal{U}$, $L\in 
\mathbb{N}$ and $\vec{\ell}\in \mathbb{N}^{d}$ is the element defined by the
space--average (\ref{definition de A L}).\smallskip

\noindent $A_{L}:=A_{L,\vec{\ell}}$ for $\vec{\ell}=(1,\cdots ,1)$, $A\in 
\mathcal{U}$, $L\in \mathbb{N}$.\smallskip

\noindent $A_{\vec{\ell}}$ is the space--average defined by (\ref{definition
de A l}) for any $\vec{\ell}\in \mathbb{N}^{d}$.\smallskip

\noindent $\Delta _{A,\vec{\ell}}$ for $A\in \mathcal{U}$ and $\vec{\ell}\in 
\mathbb{N}^{d}$ is the ($\vec{\ell}$--) space--averaging functional defined
by Definition \ref{definition de deltabis}.\smallskip

\noindent $\Delta _{A}:=\Delta _{A,(1,\cdots ,1)}$ for $A\in \mathcal{U}$ is
the space--averaging functional defined by (\ref{delta bounded}).\smallskip

\noindent $\Delta _{a,\pm }:E_{\vec{\ell}}\rightarrow \mathbb{R}$ is the
functional defined by (\ref{definition of delta long range}).\bigskip

\noindent \textbf{Internal energies and finite--volume thermodynamic
functionals}\bigskip

\noindent $U_{\Lambda }^{\Phi }\in \mathcal{U}^{+}\cap \mathcal{U}_{\Lambda
} $ is the internal energy of an interaction $\Phi $ for $\Lambda \in 
\mathcal{P}_{f}(\mathfrak{L})$, see Definition \ref{definition standard
interaction}.\smallskip

\noindent $U_{l}\in \mathcal{U}^{+}\cap \mathcal{U}_{\Lambda }$ is the
internal energy in the box $\Lambda _{l}$ of a model $\mathfrak{m}\in 
\mathcal{M}_{1}$ for $l\in \mathbb{N}$, see Definition \ref{definition
BCS-type model}.\smallskip

\noindent $\tilde{U}_{l}\in \mathcal{U}^{+}\cap \mathcal{U}_{\Lambda }$ is
the internal energy with periodic boundary conditions of a model $\mathfrak{m%
}\in \mathcal{M}_{1}$ for $l\in \mathbb{N}$, see Definition \ref{definition
BCS-type model periodized}. \smallskip

\noindent $p_{l}=p_{l,\mathfrak{m}}$ is the finite--volume pressure of $%
\mathfrak{m}\in \mathcal{M}_{1}$ defined by (\ref{BCS pressure}).\smallskip

\noindent $\tilde{p}_{l}=\tilde{p}_{l,\mathfrak{n}}$ is the finite--volume
pressure, with periodic boundary conditions, of $\mathfrak{n}\in \mathcal{N}%
_{1}$ defined by (\ref{pressure pbc}).\smallskip

\noindent $\rho _{l}:=\rho _{\Lambda _{l},U_{l}}$ is the Gibbs state (\ref%
{Gibbs.state}) associated with the internal energy $U_{l}$ in the box $%
\Lambda _{l}$ for $\mathfrak{m}\in \mathcal{M}_{1}$. \smallskip

\noindent $\tilde{\rho}_{l}:=\rho _{\Lambda _{l},\tilde{U}_{l}}$ is the
Gibbs state (\ref{Gibbs.state}) associated with the internal energy $\tilde{U%
}_{l}$ in the box $\Lambda _{l}$ for $\mathfrak{m}\in \mathcal{M}_{1}$.
\smallskip

\noindent $\hat{\rho}_{l}$ is the space--averaged t.i. Gibbs state (\ref%
{t.i. state rho l}) or (\ref{space average state p.b.c.}). \bigskip

\noindent \textbf{Infinite--volume thermodynamic functionals}\bigskip

\noindent $s:E_{\vec{\ell}}\rightarrow \mathbb{R}_{0}^{+}$ is the entropy
density functional, see Definition \ref{entropy.density}.\smallskip

\noindent $e_{\Phi }:E_{\vec{\ell}}\rightarrow \mathbb{R}$ is the energy
density functional, see Definition \ref{definition energy density}.\smallskip

\noindent $f_{\Phi }:E_{\vec{\ell}}\rightarrow \mathbb{R}$ is the
free--energy density functional, see Definition \ref{Remark free energy
density}.\smallskip

\noindent $g_{\mathfrak{m}}:E_{\vec{\ell}}\rightarrow \mathbb{R}$ is the
reduced free--energy density functional w.r.t. any $\mathfrak{m}\in \mathcal{%
M}_{1}$, see Definition \ref{Reduced free energy}.\smallskip

\noindent $f_{\mathfrak{m}}^{\flat }:E_{\vec{\ell}}\rightarrow \mathbb{R}$
is the functional defined by (\ref{convex functional g_m}). \smallskip

\noindent $f_{\mathfrak{m}}^{\sharp }:E_{\vec{\ell}}\rightarrow \mathbb{R}$
is the reduced free--energy density functional w.r.t. any $\mathfrak{m}\in 
\mathcal{M}_{1}$, see Definition \ref{Free-energy density long range}%
.\smallskip

\noindent $\mathrm{P}_{\mathfrak{m}}^{\flat }:\mathcal{M}_{1}\rightarrow 
\mathbb{R}$ is the the variational problem (\ref{pressure bemol}). \smallskip

\noindent $\mathrm{P}_{\mathfrak{m}}^{\sharp }:\mathcal{M}_{1}\rightarrow 
\mathbb{R}$ is the (infinite--volume) pressure, see Definition \ref{Pressure}%
.\bigskip

\noindent \textbf{Approximating interactions and thermodynamic game}\bigskip

\noindent $L_{\pm }^{2}(\mathcal{A},\mathbb{C})\subseteq L^{2}(\mathcal{A},%
\mathbb{C})$ are the Hilbert spaces defined by (\ref{definition of
positive-negative L2 space}).\smallskip

\noindent $\Phi (c_{a})$ for $c_{a}\in L^{2}(\mathcal{A},\mathbb{C})$ is the
approximating interaction of any model $\mathfrak{m}\in \mathcal{M}_{1}$,
see Definition \ref{definition BCS-type model approximated}.\smallskip

\noindent $U_{l}(c_{a})$ is the internal energy of the approximating
interaction of $\Phi (c_{a})$, see (\ref{internal and surface energies
approximated}).\smallskip

\noindent $p_{l}\left( c_{a}\right) $ is the finite--volume pressure
associated with $U_{l}(c_{a})$, see (\ref{pression approximated}).\smallskip

\noindent $P_{\mathfrak{m}}\left( c_{a}\right) $ is the (infinite--volume)
pressure associated with $\Phi (c_{a})$, see (\ref{variational problem
approx}). \smallskip

\noindent $f_{\mathfrak{m}}\left( \rho ,c_{a}\right) $ is the free--energy
density functional associated with $\Phi (c_{a})$, see (\ref{free--energy
density approximated 1}). \smallskip

\noindent $\mathfrak{f}_{\mathfrak{m}}:L_{-}^{2}(\mathcal{A},\mathbb{C}%
)\times L_{+}^{2}(\mathcal{A},\mathbb{C})\rightarrow \mathbb{R}$ is the
approximating free--energy density functional, see Definition \ref%
{definition approximating free--energy}.\smallskip

\noindent $\mathrm{F}_{\mathfrak{m}}^{\flat }$ is the first conservative
value of the thermodynamic game, see Definition \ref{definition two--person
zero--sum game}.\smallskip

\noindent $\mathrm{F}_{\mathfrak{m}}^{\sharp }$ is the second conservative
value of the thermodynamic game, see Definition \ref{definition two--person
zero--sum game}.\smallskip

\noindent $\mathfrak{f}_{\mathfrak{m}}^{\flat }$ is the least gain\emph{\ }%
functional of the attractive player, see Definition \ref{definition
two--person zero--sum game}.\smallskip

\noindent $\mathfrak{f}_{\mathfrak{m}}^{\sharp }$ is the worst loss
functional of the repulsive player, see Definition \ref{definition
two--person zero--sum game}.\smallskip

\noindent $\mathcal{C}_{\mathfrak{m}}^{\flat }$ is the set of conservative
strategies of the repulsive player, i.e., the set of minimizers of $%
\mathfrak{f}_{\mathfrak{m}}^{\flat }$, see (\ref{eq conserve strategy}%
).\smallskip

\noindent $\mathcal{C}_{\mathfrak{m}}^{\sharp }$ is the set of conservative
strategies of the attractive player, i.e., the set of minimizers of $%
\mathfrak{f}_{\mathfrak{m}}^{\sharp }$, see (\ref{eq conserve strategy}%
).\smallskip

\noindent $\mathcal{C}_{\mathfrak{m}}^{\flat }\left( c_{a,+}\right) $ is the
set of minimizers of $\mathfrak{f}_{\mathfrak{m}}\left(
c_{a,-},c_{a,+}\right) $ at fixed $c_{a,+}\in L_{+}^{2}(\mathcal{A},\mathbb{C%
})$, see (\ref{eq conserve strategybis}).\smallskip

\noindent $\mathcal{C}_{\mathfrak{m}}^{\sharp }\left( c_{a,-}\right) $ is
the set of minimizers of $\mathfrak{f}_{\mathfrak{m}}\left(
c_{a,-},c_{a,+}\right) $ at fixed $c_{a,-}\in L_{-}^{2}(\mathcal{A},\mathbb{C%
})$, see (\ref{eq conserve strategybis}).\smallskip

\noindent $\mathrm{C}\left( L_{-}^{2},L_{+}^{2}\right) $ is the set of
continuous decision rules of the repulsive player, that is, the set of
continuous mappings from $L_{-}^{2}(\mathcal{A},\mathbb{C})$ to $L_{+}^{2}(%
\mathcal{A},\mathbb{C})$ with $L_{-}^{2}(\mathcal{A},\mathbb{C})$ and $%
L_{+}^{2}(\mathcal{A},\mathbb{C})$ equipped with the weak and norm
topologies, respectively.\smallskip

\noindent $\mathrm{r}_{+}\in \mathrm{C}\left( L_{-}^{2},L_{+}^{2}\right) $
is the thermodynamic decision rule (\ref{thermodyn decision rule}) of the
model $\mathfrak{m}\in \mathcal{M}_{1}$. \smallskip

\noindent $\mathfrak{f}_{\mathfrak{m}}^{\mathrm{ext}}:L_{-}^{2}(\mathcal{A},%
\mathbb{C})\rightarrow \mathrm{C}(L_{-}^{2},L_{+}^{2})$ is the loss--gain
function (\ref{extended functional}) of the extended thermodynamic game of
the model $\mathfrak{m}$.\bigskip

\noindent \textbf{Theories}\bigskip

\noindent $\mathfrak{T}_{\mathfrak{m}}\subseteq \mathcal{M}_{1}$ is a theory
for $\mathfrak{m}\in \mathcal{M}_{1}$, see Definition \ref{definition theory}%
.\smallskip

\noindent $\mathfrak{T}_{\mathfrak{m}}^{+}\subseteq \mathcal{M}_{1}$ is the
min repulsive theory for $\mathfrak{m}\in \mathcal{M}_{1}$, see Definition %
\ref{definition min repulsive theory}.\smallskip

\noindent $\mathfrak{T}_{\mathfrak{m}}^{\sharp }\subseteq \mathcal{W}_{1}$
is the min--max local theory for $\mathfrak{m}\in \mathcal{M}_{1}$, see
Definition \ref{definition effective theories bogo}.\bigskip

\noindent \textbf{General notation}\bigskip

\noindent $\mathfrak{L}$ stands for $\mathbb{Z}^{d}$ as seen as a set
(lattice), whereas with $\mathbb{Z}^{d}$ the abelian group $(\mathbb{Z}%
^{d},+)$ is meant, cf. Notation \ref{Notation1}.\smallskip

\noindent Any symbol with a circle $\circ $ as a superscript is, by
definition, an object related to gauge invariance, see Notation \ref%
{Notation2}.\smallskip

\noindent The letters $\rho $, $\varrho $, and $\omega $ are exclusively
reserved to denote states, see Notation \ref{Notation3}.\smallskip

\noindent Extreme points of $E_{\vec{\ell}}$ are written as $\hat{\rho}\in 
\mathcal{E}_{\vec{\ell}}$ or sometime $\hat{\omega}\in \mathcal{E}_{\vec{\ell%
}}$, see Notation \ref{notation extreme states}.\smallskip

\noindent The letters $\Phi $ and $\Psi $ are exclusively reserved to denote
interactions, see Notation \ref{Notation5}.\smallskip

\noindent The letter $\omega $ is exclusively reserved to denote generalized
t.i. equilibrium states. Extreme points of $\mathit{\Omega }_{\mathfrak{m}%
}^{\sharp }$ are usually written as $\hat{\omega}\in \mathcal{E}(\mathit{%
\Omega }_{\mathfrak{m}}^{\sharp })$ (cf. Theorem \ref{theorem Krein--Millman}%
), see Notation \ref{Notation eq states}.\smallskip

\noindent The letter $\varphi $ is exclusively reserved to denote
interaction kernels, see Definition \ref{definition interaction kernel} and
Notation \ref{Notation7}.

\noindent The symbol $\mathfrak{m}:=(\Phi ,\{\Phi _{a}\}_{a\in \mathcal{A}%
},\{\Phi _{a}^{\prime }\}_{a\in \mathcal{A}})\in \mathcal{M}_{1}$ is
exclusively reserved to denote elements of $\mathcal{M}_{1}$, see Notation %
\ref{Notation6}.\smallskip

\noindent Any symbol with a tilde on the top (for instance, $\tilde{p}$) is,
by definition, an object related to periodic boundary conditions., see
Notation \ref{Notation pbc}.\smallskip

\noindent The symbol $\mathfrak{n}=(\varphi ,\{\varphi _{a}\}_{a\in \mathcal{%
A}},\{\varphi _{a}^{\prime }\}_{a\in \mathcal{A}})\in \mathcal{N}_{1}$ is
exclusively reserved to denote elements of $\mathcal{N}_{1}$, see Notation %
\ref{Notation8}.

\noindent $\Gamma _{K}\left( h\right) $ is the $\Gamma $--regularization of
a real functional $h$ on a subset $K$, see Definition \ref{Section gamma
regularization}.

\include{Bru_Pedra_memoire_AMS_18-index}%


\begin{thebibliography}{99}
\bibitem{Rudin} \textsc{W. Rudin}, \textit{Functional Analysis}. McGraw-Hill
Science, 1991

\bibitem{Alfsen} \textsc{E. M. Alfsen}, \textit{Compact convex sets and
boundary integrals}. Ergebnisse der Mathematik und ihrer Grenzgebiete --
Band 57. Springer-Verlag, 1971

\bibitem{Phe} \textsc{R.R. Phelps}, \textit{Lectures on Choquet's Theorem}.
2nd Edition. Lecture Notes in Mathematics, Vol. 1757. Berlin / Heidelberg:
Springer-Verlag, 2001

\bibitem{Israel} \textsc{R.B. Israel}, \textit{Convexity in the theory of
lattice gases}. Princeton: Princeton Series in Physics, Princeton Univ.
Press, 1979

\bibitem{BrattelliRobinson} \textsc{O. Bratteli and D.W. Robinson}, \textit{%
Operator Algebras and Quantum Statistical Mechanics, Vol. II, 2nd ed.} New
York: Springer-Verlag, 1996

\bibitem{RaggioWerner1} \textsc{G.A. Raggio and R.F. Werner}, Quantum
statistical mechanics of general mean field systems. Helv. Phys. Acta 
\textbf{62}, 980--1003 (1989)

\bibitem{RaggioWerner2} \textsc{G.A. Raggio and R.F. Werner}, The Gibbs
variational principle for inhomogeneous mean field systems. Helv. Phys. Acta 
\textbf{64}, 633--667 (1991)

\bibitem{Araki-Moriya} \textsc{H. Araki and H. Moriya}, Equilibrium
Statistical Mechanics of Fermion Lattice Systems. Rev. Math. Phys. \textbf{15%
}, 93--198 (2003)

\bibitem{BruPedra1} \textsc{J.-B. Bru and W. de Siqueira Pedra}, Effect of a
locally repulsive interaction on s--wave superconductors. Rev. Math. Phys. 
\textbf{22}(3), 233--303 (2010)

\bibitem{BruPedraAniko} \textsc{J.-B. Bru, W. de Siqueira Pedra and A. D{%
\"{o}}mel}, A microscopic two--band model for the electron-hole asymmetry in
high-$T_{c}$ superconductors and reentering behavior. J. Math. Phys. \textbf{%
52}, 073301-(1--28) (2011)

\bibitem{LeBoeuf} \textsc{D. LeBoeuf \textit{et al.}}, Electron pockets in
the Fermi surface of hole-doped high-Tc superconductors. Nature \textbf{450}%
, 533--536 (2007)

\bibitem{Pfleiderer} \textsc{C. Pfleiderer and R. Hackl}, Schizophrenic
electrons. Nature \textbf{450},\textbf{\ }492--493 (2007)

\bibitem{Ginibre} \textsc{J. Ginibre}, On the Asymptotic Exactness of the
Bogoliubov Approximation for many Bosons Systems. Commun. Math. Phys. 
\textbf{8}, 26--51 (1968)

\bibitem{Stormer} \textsc{E. St\o rmer}, Symmetric states of infinite tensor
product $C^{\ast }$--algebras. J. Functional Analysis \textbf{3}, 48--68
(1969)

\bibitem{Bogjunior} \textsc{N.N. Bogoliubov Jr.}, On model dynamical systems
in statistical mechanics. Physica \textbf{32}, 933 (1966)

\bibitem{approx-hamil-method0} \textsc{N.N. Bogoliubov Jr., J.G. Brankov,
V.A. Zagrebnov, A.M. Kurbatov and N.S. Tonchev}, \textit{Metod
approksimiruyushchego gamil'toniana v statisticheskoi fizike\footnote{%
The Approximating Hamiltonian Method in\ Statistical Physics.}.} Sofia:
Izdat. Bulgar. Akad. Nauk\footnote{%
Publ. House Bulg. Acad. Sci.}, 1981

\bibitem{approx-hamil-method} \textsc{N.N. Bogoliubov Jr., J.G. Brankov,
V.A. Zagrebnov, A.M. Kurbatov and N.S. Tonchev}, Some classes of exactly
soluble models of problems in Quantum Statistical Mechanics: the method of
the approximating Hamiltonian. Russ. Math. Surv. \textbf{39,} 1-50 (1984)

\bibitem{approx-hamil-method2} \textsc{J.G. Brankov, D.M. Danchev and N.S.
Tonchev}, \textit{Theory of Critical Phenomena in Finite--size Systems:
Scaling and Quantum Effects.} Singapore--New Jersey--London--Hong Kong: Word
Scientific, 2000

\bibitem{BrattelliRobinsonI} \textsc{O. Brattelli and D.W. Robinson}, 
\textit{Operator Algebras and Quantum Statistical Mechanics, Vol. I, 2nd ed.}
New York: Springer--Verlag, 1996

\bibitem{Fannes} \textsc{M. Fannes}, The entropy density of quasi free
states. Comm. Math. Phys. \textbf{31}, 279--290 (1973)

\bibitem{Fannesbis} \textsc{M. Fannes}, A continuity property of the entropy
density for spin lattice systems. Comm. Math. Phys. \textbf{31}, 291--294
(1973)

\bibitem{Israel-long-range} \textsc{R.B. Israel}, Generic Triviality of
Phase Diagrams in Spaces of Long--Range Interactions. Comm. Math. Phys. 
\textbf{106}, 459--466 (1986)

\bibitem{Petz2008} \textsc{F. Hiai, M. Mosonyi, H. Ohno and D. Petz}, Free
energy density for mean field perturbation of states of a one-dimensional
spin chain. \textit{Rev. Math. Phys.} \textbf{20}(3), 335-365 (2008)

\bibitem{monsieurremark} \textsc{W. De Roeck, C. Maes, K. Netocny and L.
Rey-Bellet}, A note on the non-commutative Laplace-Varadhan integral lemma. 
\textit{Rev. Math. Phys.} \textbf{22}(7), 839--858 (2010)

\bibitem{Fannne-Pule-Verbeure1} \textsc{M. Fannes, J.V. Pul\'{e} and A.F.
Verbeure}, On Bose condensation, Helv. Phys. Acta \textbf{55}, 391--399
(1982)

\bibitem{Fannne-Pule-Verbeure2} \textsc{J.V. Pul\'{e}, A.F. Verbeure and
V.A. Zagrebnov}, On non-homogeneous Bose condensation, J. Math. Phys. 
\textbf{46}(8) 083301 (1--8) (2005).

\bibitem{ODLRO} \textsc{C.N. Yang}, Concept of off-diagonal long range order
and the quantum phases of liquid He and of superconductors. Rev. Mod. Phys. 
\textbf{34}, 694--704 (1962).

\bibitem{Metzner} \textsc{W. Metzner, C. Castellani and C. Di Castro}, Fermi
systems with strong forward scattering. Advances in Physics \textbf{47}(3),
317--445 (1998)

\bibitem{Yamase} \textsc{H. Yamase and W. Metzner}, Competition of Fermi
surface symmetry breaking and superconductivity. Phys. Rev. B \textbf{75}
155117-1--6 (2007)

\bibitem{Haldane} \textsc{F.D.M. Haldane}. Helv. Phys. Acta \textbf{65}, 152
(1992); Proceedings of the International School of Physics `Enrico Fermi' ,
Course CXXI, edited by R. A. Broglia and J. R. Schrieffer (Amsterdam:
North-Holland, 1994).

\bibitem{Kopietz} \textsc{P. Kopietz, L. Bartosch and F. Sch\"{u}tz}, Chap.
11 (pages 305-326): Normal Fermions: Partial Bosonization in the Forward
Scattering Channel. \textit{Introduction to the Functional Renormalization
Group}, Lecture Notes in Physics, Volume 798 (2010)

\bibitem{BCSrigorous1} \textsc{N.G. Duffield and J.V. Pul\'{e}}, A new
method for the thermodynamics of the BCS model. Commun. Math. Phys. \textbf{%
118}, 475--494 (1988)

\bibitem{BCS1} \textsc{L. N. Cooper}, Bound Electron Pairs in a Degenerate
Fermi Gas. Phys. Rev \textbf{104}, 1189--1190 (1956)

\bibitem{BCS2} \textsc{J. Bardeen, L.N. Cooper and J.R. Schrieffer},
Microscopic Theory of Superconductivity. Phys. Rev. \textbf{106}, 162--164
(1957)

\bibitem{BCS3} \textsc{J. Bardeen, L.N. Cooper and J.R. Schrieffer}, Theory
of Superconductivity. Phys. Rev. \textbf{108}, 1175--1204 (1957)

\bibitem{BruPedra-homog} \textsc{J.-B. Bru and W. de Siqueira Pedra},
Inhomogeneous Fermi or Quantum Spin Systems on Lattices -- I. In preparation
(2011).

\bibitem{BruPedra-homogbis} \textsc{J.-B. Bru and W. de Siqueira Pedra},
Inhomogeneous Fermi or Quantum Spin Systems on Lattices -- II. In
preparation (2011).

\bibitem{Sewell} \textsc{G. L. Sewell}, \textit{Quantum Theory of Collective
Phenomena. }Oxford: Clarendon Press, 1986

\bibitem{LanRob} \textsc{O.E. Lanford III and D.W. Robinson}, Statistical
mechanics of quantum spin systems. III. Commun. Math. Phys. \textbf{9},
327--338 (1968)

\bibitem{Haag62} \textsc{R. Haag}, The Mathematical Structure of the
Bardeen--Cooper--Schrieffer Model. Il Nuovo Cimento. \textbf{Vol. XXV}, N.2,
287--299 (1962)

\bibitem{ThirWeh67} \textsc{W. Thirring and A. Wehrl}, On the Mathematical
Structure of the B.C.S.--Model. Commun. Math. Phys. \textbf{4}, 303--314
(1967)

\bibitem{Emch} \textsc{G. Emch}, \textit{Algebraic Methods in Statistical
Mechanics and Quantum Field Theory}. New York: Willey--Interscience, 1972

\bibitem{Mazur} \textsc{S. Mazur}, {\"{U}}ber konvexe Menge in linearen
normierten Raumen. Studia. Math. \textbf{4}, 70--84 (1933)

\bibitem{Simon} \textsc{B. Simon}, \textit{The Statistical Mechanics of
Lattice Gases.} Princeton: University Press, 1993

\bibitem{BruZagrebnov8} \textsc{V.A. Zagrebnov and J.-B. Bru}, The
Bogoliubov Model of Weakly Imperfect Bose Gas. Phys. Rep. \textbf{350},
291--434 (2001)

\bibitem{Aubinbis} \textsc{J.-P. Aubin}, \textit{Mathematical Methods of
Game and Economic Theory.} Dover Publications Inc., 2009

\bibitem{Bogoliubov1} \textsc{N.N. Bogoliubov}, On the theory of
superfluidity. J. Phys. (USSR) \textbf{11}, 23--32 (1947)

\bibitem{LiebSeiringerYngvason3} \textsc{E.H. Lieb, R. Seiringer and J.
Yngvason}, Justification of $c$--Number Substitutions in Bosonic
Hamiltonians. Phys. Rev. Lett. \textbf{94}, 080401-1-4 (2005)

\bibitem{Suto} \textsc{A. S\"{u}t\H{o}}, Equivalence of Bose-Einstein
Condensation and Symmetry Breaking. Phys. Rev. Lett. \textbf{94},
080402-1--4 (2005)

\bibitem{bru1} \textsc{J.-B. Bru}, Superstabilization of Bose Systems I :
Thermodynamic study. J. Phys. A : Math. Gen. \textbf{35} 8969--8994 (2002)

\bibitem{bru2} \textsc{J.-B. Bru}, Superstabilization of Bose Systems II :
Bose condensations and Equivalence of Ensembles. J. Phys. A : Math. Gen. 
\textbf{35} 8995--9024 (2002)

\bibitem{BruZagrebnov6} \textsc{J.-B. Bru and V.A. Zagrebnov}, {\ On
condensations in the Bogoliubov Weakly Imperfect Bose-Gas}, J. Stat. Phys. 
\textbf{99}, 1297--1338 (2000)

\bibitem{Bogoliubov2-1960} \textsc{N.N. Bogoliubov}, Physica \textbf{26}, 1
(1960)

\bibitem{Bogjunior-livre} \textsc{N.N. Bogoliubov Jr.}, \textit{A method for
studying model Hamiltonians}. Oxford: Pergamon, 1977

\bibitem{S} \textsc{M. Fannes, H. Spohn and A. Verbeure}, Equilibrium states
for mean field models. J. Math. Phys. \textbf{21}(2) 355--358 (1980)

\bibitem{Lindenstrauss-etal} \textsc{J. Lindenstrauss, G.H. Olsen and Y.
Sternfeld}, The Poulsen simplex. Ann. Inst. Fourier (Grenoble) \textbf{28},
91--114 (1978)

\bibitem{Schirotzek} \textsc{W. Schirotzek}, \textit{Nonsmooth Analysis.}
Berlin--Heidelberg--New York: Springer-Verlag, 2007

\bibitem{BruPedraconvex} \textsc{J.-B. Bru and W. de Siqueira Pedra},
Remarks on the $\Gamma $--regularization of Non-convex and
Non-Semi-Continuous Functionals on Topological Vector Spaces. J. Convex
Analysis \textbf{19}(3) [final page numbers not yet available] (2012)

\bibitem{Zeidler3} \textsc{E. Zeidler}, \textit{Nonlinear Functional
Analysis and its Applications III: Variational Methods and Optimization.}
New York: Springer--Verlag, 1985

\bibitem{Aubin} \textsc{J.-P. Aubin}, \textit{Optima and Equilibria: An
Introduction to Nonlinear Analysis. }Berlin-Heidelberg: Springer-Verlag, 1998

\bibitem{Conway} \textsc{J.B. Conway}, \textit{A Course in Functional
Analysis}. New York: Springer-Verlag, 2nd Revised edition 2000

\bibitem{Poulsen} \textsc{E.T. Poulsen}, A simplex with dense extreme
boundary. Ann. Inst. Fourier (Grenoble) \textbf{11}, 83--87 (1961)

\bibitem{spectral thm} \textsc{M. Sh. Birman and M.Z. Solomjak}, \textit{%
Spectral Theory of Self-Adjoint Operators in Hilbert Space}, Mathematics and
its Applications, Springer Netherlands, D. Reidel Publishing Company,
Dordrecht, Holland, 1987
\end{thebibliography}
\end{document}